\DeclareSymbolFontAlphabet{\mathbb}{AMSb}
\DeclareSymbolFontAlphabet{\mathbbl}{bbold}
\xpatchcmd{\paragraph}{\normalfont}{{\normalfont\itshape}}{}{}
\newcounter{intro}
\let\c@equation\c@figure
\newcounter{copy}
\renewcommand{\thecopy}{\ifnum0=\c@section\arabic{copy}\else\thesection.\arabic{copy}'\fi}
\theoremstyle{definition}
\newtheorem{defn}[equation]{Definition}
\theoremstyle{plain}
\newtheorem{thm}[equation]{Theorem}
\newtheorem{introthm}[intro]{Theorem}
\newtheorem{prp}[equation]{Proposition}
\newtheorem{lem}[equation]{Lemma}
\newtheorem{cor}[equation]{Corollary}
\newtheorem{conj}[equation]{Conjecture}
\theoremstyle{remark}
\newtheorem{rmk}[equation]{Remark}
\newtheorem{exmp}[equation]{Example}
\crefname{defn}{Definition}{Definitions}
\crefname{notn}{Notation}{Notations}
\crefname{assmp}{Assumption}{Assumptions}
\crefname{thm}{Theorem}{Theorems}
\crefname{introthm}{Theorem}{Theorems}
\crefname{prp}{Proposition}{Propositions}
\crefname{lem}{Lemma}{Lemmas}
\crefname{cor}{Corollary}{Corollaries}
\crefname{conj}{Conjecture}{Conjectures}
\crefname{rmk}{Remark}{Remarks}
\crefname{exmp}{Example}{Examples}
\crefname{section}{Section}{Sections}
\crefname{subsection}{Subsection}{Subsections}
\crefname{para}{}{}
\crefname{appendix}{Appendix}{Appendices}
\crefname{subappendix}{Appendix}{Appendices}
\crefname{table}{Table}{Tables}
\crefname{figure}{Figure}{Figures}
\newcommand{\vvert}{\| \kern-0.23ex | }
\newcommand{\bigvvert}{\big\| \kern-0.23ex \big| }
\newcommand{\Bigvvert}{\Big\| \kern-0.23ex \Big| }
\newcommand{\dbloverline}[1]{\overline{\dbl@overline{#1}}}
\newcommand{\dbl@overline}[1]{\mathpalette\dbl@@overline{#1}}
\newcommand{\dbl@@overline}[2]{%
  \begingroup
  \sbox\z@{$\m@th#1\overline{#2}$}%
  \ht\z@=\dimexpr\ht\z@-2\dbl@adjust{#1}\relax
  \box\z@
  \ifx#1\scriptstyle\kern-\scriptspace\else
  \ifx#1\scriptscriptstyle\kern-\scriptspace\fi\fi
  \endgroup
}
\newcommand{\dbl@adjust}[1]{%
  \fontdimen8
  \ifx#1\displaystyle\textfont\else
  \ifx#1\textstyle\textfont\else
  \ifx#1\scriptstyle\scriptfont\else
  \scriptscriptfont\fi\fi\fi 3
}
\newcommand{\bA}{\mathbb{A}}
\newcommand{\bB}{\mathbb{B}}
\newcommand{\bC}{\mathbb{C}}
\newcommand{\bD}{\mathbb{D}}
\newcommand{\bE}{\mathbb{E}}
\newcommand{\bN}{\mathbb{N}}
\newcommand{\bP}{\mathbb{P}}
\newcommand{\bQ}{\mathbb{Q}}
\newcommand{\bR}{\mathbb{R}}
\newcommand{\bT}{\mathbb{T}}
\newcommand{\bV}{\mathbb{V}}
\newcommand{\bZ}{\mathbb{Z}}
\newcommand{\cA}{\mathcal{A}}
\newcommand{\cB}{\mathcal{B}}
\newcommand{\cF}{\mathcal{F}}
\newcommand{\cG}{\mathcal{G}}
\newcommand{\cI}{\mathcal{I}}
\newcommand{\cJ}{\mathcal{J}}
\newcommand{\cM}{\mathcal{M}}
\newcommand{\cO}{\mathcal{O}}
\newcommand{\cP}{\mathcal{P}}
\newcommand{\cR}{\mathcal{R}}
\newcommand{\cS}{\mathcal{S}}
\newcommand{\cU}{\mathcal{U}}
\newcommand{\cZ}{\mathcal{Z}}
\newcommand{\fC}{\mathfrak{C}}
\newcommand{\fD}{\mathfrak{D}}
\newcommand{\fH}{\mathfrak{H}}
\newcommand{\fI}{\mathfrak{I}}
\newcommand{\fJ}{\mathfrak{J}}
\newcommand{\fL}{\mathfrak{L}}
\newcommand{\fR}{\mathfrak{R}}
\newcommand{\fS}{\mathfrak{S}}
\newcommand{\fT}{\mathfrak{T}}
\newcommand{\fa}{\mathfrak{a}}
\newcommand{\fc}{\mathfrak{c}}
\newcommand{\fd}{\mathfrak{d}}
\newcommand{\fm}{\mathfrak{m}}
\newcommand{\sfE}{\mathsf{E}}
\newcommand{\sfG}{\mathsf{G}}
\newcommand{\sfH}{\mathsf{H}}
\newcommand{\sfP}{\mathsf{P}}
\newcommand{\sfT}{\mathsf{T}}
\newcommand{\sfV}{\mathsf{V}}
\newcommand{\sfW}{\mathsf{W}}
\newcommand{\sfX}{\mathsf{X}}
\newcommand{\sfh}{\mathsf{h}}
\newcommand{\sfs}{\mathsf{s}}
\newcommand{\sfu}{\mathsf{u}}
\newcommand{\sfx}{\mathsf{x}}
\newcommand{\sfy}{\mathsf{y}}
\newcommand{\sfad}{\mathsf{ad}}
\newcommand{\sA}{\mathscr{A}}
\newcommand{\sD}{\mathscr{D}}
\newcommand{\sE}{\mathscr{E}}
\newcommand{\sF}{\mathscr{F}}
\newcommand{\sG}{\mathscr{G}}
\newcommand{\sH}{\mathscr{H}}
\newcommand{\sK}{\mathscr{K}}
\newcommand{\sM}{\mathscr{M}}
\newcommand{\sN}{\mathscr{N}}
\newcommand{\sU}{\mathscr{U}}
\newcommand{\sV}{\mathscr{V}}
\newcommand{\sX}{\mathscr{X}}
\newcommand{\sY}{\mathscr{Y}}
\newcommand{\rmd}{\mathrm{d}}
\newcommand{\op}{\mathrm{op}}
\newcommand{\blank}{\text{\textvisiblespace}}
\renewcommand{\Im}{\mathrm{Im} \hspace{0.1em}}
\newcommand{\pt}{\mathrm{pt}}
\newcommand{\loc}{\mathrm{loc}}
\newcommand{\id}{\mathrm{id}}
\newcommand{\ev}{\mathrm{ev}}
\newcommand{\fin}{\mathrm{fin}}
\newcommand{\Sp}{\mathsf{Sp}}
\DeclareMathOperator*{\Rep}{\mathsf{Rep}}
\DeclareMathOperator{\hotimes}{\hat{\otimes }}
\DeclareMathOperator{\ch}{\mathrm{ch}} 
\DeclareMathOperator{\Ker}{\mathrm{Ker}}
\DeclareMathOperator{\Map}{\mathrm{Map}}
\DeclareMathOperator{\Hom}{\mathrm{Hom}}
\DeclareMathOperator{\Aut}{\mathrm{Aut}}
\DeclareMathOperator{\Ad}{\mathrm{Ad}}
\DeclareMathOperator{\diag}{\mathrm{diag}}
\DeclareMathOperator{\diam}{\mathrm{diam}}
\DeclareMathOperator*{\colim}{\mathrm{colim}}
\DeclareMathOperator{\Stab}{Stab}
\newcommand{\pr}{\mathrm{pr}}
\DeclareMathOperator{\flip}{\mathtt{f}\hspace{-0.2ex}\mathtt{l}}
\DeclareMathOperator{\sHom}{\mathscr{H}\mathrm{om}}
\newcommand{\Or}{\mathsf{Or}}
\newcommand{\Man}{\mathsf{Man}}
\DeclareMathOperator{\Sing}{\mathsf{Sing}}
\newcommand{\kTop}{\mathsf{kTop}}
\newcommand{\CW}{\mathsf{CW}}
\newcommand{\Set}{\mathsf{Set}}
\newcommand{\sSet}{\mathsf{sSet}}
\DeclareMathOperator{\Sh}{\mathsf{Sh}}
\DeclareMathOperator{\PSh}{\mathsf{PSh}}
\newcommand{\ECW}{\mathsf{ECW}}
\newcommand{\ffH}{\mathfrak{f}\mathfrak{H}}
\newcommand{\bsfH}{\boldsymbol{\mathsf{H}}}
\newcommand{\sGP}{\mathscr{GP}}
\newcommand{\sfGP}{\mathit{f}\mathscr{GP}}
\newcommand{\sIP}{\mathscr{IP}}
\newcommand{\bsIP}{\boldsymbol{\mathscr{IP}}}
\newcommand{\IP}{\mathit{IP}}
\newcommand{\rIP}{\mathrm{IP}}
\newcommand{\sfIP}{\mathit{f}\mathscr{IP}}
\newcommand{\fIP}{\mathit{fIP}}
\newcommand{\rfIP}{\mathrm{fIP}}
\newcommand{\bsfIP}{\boldsymbol{\mathit{f}\mathscr{IP}}}
\DeclareMathOperator*{\hocolim}{\mathrm{ho}\mathchar`-\mathrm{colim}}
\newcommand{\piloc}{\pi\mathchar`-\mathrm{loc}}
\newcommand{\alg}{\mathrm{alg}}
\newcommand{\al}{\mathrm{al}}
\newcommand{\Sq}{\mathit{Sq}}
\newcommand{\gap}{\delta }
\newcommand{\gapone}{1 }
\newcommand{\KO}{\mathrm{KO}}
\newcommand{\DW}{\mathrm{DW}}
\newcommand{\pst}{{}_\star}
\newcommand{\midbar}{\, | \, }
\newcommand{\fDer}{\mathfrak{D}}
\newcommand{\fFDer}{\mathfrak{fD}}
\newcommand{\fLDer}{\mathfrak{LD}}
\newcommand{\doublewidetilde}[1]{{%
  \mathpalette\double@widetilde{#1}%
}}
\newcommand{\double@widetilde}[2]{%
  \sbox\z@{$\m@th#1\widetilde{#2}$}%
  \ht\z@=0.56\ht\z@
  \widetilde{\box\z@}%
}
\title[Stable homotopy theory of invertible gapped quantum spin systems I]{Stable homotopy theory of invertible gapped quantum spin systems I: Kitaev's \texorpdfstring{$\Omega$}{Omega}-spectrum}
\author{Yosuke Kubota}
\address{Graduate School of Science, Kyoto University, Kitashirakawa Oiwake-cho, Sakyo-ku, Kyoto 606-8502, Japan}
\email{ykubota@math.kyoto-u.ac.jp}
\date{}
\begin{document}
\begin{abstract}
We provide a mathematical realization of a conjecture by Kitaev, on the basis of the operator-algebraic formulation of infinite quantum spin systems. 
Our main results are threefold. 
First, we construct an $\Omega$-spectrum $\IP_*$ whose homotopy groups are isomorphic to the smooth homotopy group of invertible gapped quantum systems on Euclidean spaces. 
Second, we develop a model for the homology theory associated with the $\Omega$-spectrum $\IP_*$, describing it in terms of the space of quantum systems placed on an arbitrary subspace of a Euclidean space. 
This involves introducing the concept of localization flow, a semi-infinite path of quantum systems with decaying interaction range, inspired by Yu's localization C*-algebra in coarse index theory. 
Third, we incorporate spatial symmetries given by a crystallographic group $\Gamma $ and define the $\Omega$-spectrum $\IP_*^\Gamma$ of $\Gamma$-invariant invertible phases. 
We propose a strategy for computing the homotopy group $\pi_n(\IP_d^\Gamma )$ that uses the Davis--L\"{u}ck assembly map and its description by invertible gapped localization flow. 
In particular, we show that the assembly map is split injective, and hence $\pi_n(\IP_d^\Gamma)$ contains a computable direct summand.
\end{abstract}
\maketitle

\tableofcontents 

\section{Introduction}
In his lectures \cites{kitaevTopologicalClassificationManybody2011,kitaevClassificationShortrangeEntangled2013,kitaevHomotopytheoreticApproachSPT2015,kitaevTopologicalQuantumPhases2019}, A.~Kitaev proposed that the set of $d$-dimensional invertible gapped systems $\{\IP_d\}_{d\in \bZ_{\geq 0}}$ would form an $\Omega$-spectrum. 
A consequence of this conjecture is that the homotopy sets $\rIP^d(\sM)\coloneqq [\sM,\IP_d]$ form a generalized cohomology functor (as references of stable homotopy theory, we refer to \cites{rudyakThomSpectraOrientability1998,switzerAlgebraicTopologyHomotopy2002}).
Here, a quantum system represented by a gapped Hamiltonian $\sfH$ is said to be invertible if there is another Hamiltonian $\check{\sfH}$ such that the composite system $\sfH \boxtimes \check{\sfH}$ is homotopic to the trivial (i.e., a fixed choice of atomic) Hamiltonian.
This was motivated by a successful classification of topological insulators (free fermion phases) by real and complex K-theories \cites{kitaevPeriodicTableTopological2009,schnyderClassificationTopologicalInsulators2008,schnyderClassificationTopologicalInsulators2009}. 
Its  contribution is the study of the symmetry-protected topological (SPT) phase, i.e., the $0$-th homotopy group of the set of invertible gapped quantum systems invariant under an on-site group action. 
For example, the $\bZ/2$ topological index of the AKLT and Haldane phases of quantum spin chains \cites{haldaneContinuumDynamics1D1983,affleckValenceBondGround1988} has been understood as an instance of a topological invariant taking values in $\mathrm{H}^{2}(G \, ;\bT) \cong \mathrm{H}^{3}(G \, ;\bZ)$ due to recent advances \cites{guTensorentanglementfilteringRenormalizationApproach2009,pollmannEntanglementSpectrumTopological2010,pollmannSymmetryProtectionTopological2012,ogataMathbb_2Index2020,tasakiGroundStateS12025}. 
Various approaches in the last decade have studied this idea. The following is an incomplete list of existing research.
\begin{enumerate}
    \item A classification of bosonic and fermionic SPT phases via group (super-)cohomology has been considered in 
\cites{chenLocalUnitaryTransformation2010,chenClassificationGappedSymmetric2011,chenSymmetryprotectedTopologicalOrders2012,chenSymmetryProtectedTopological2013,guSymmetryprotectedTopologicalOrders2014,wangCompleteClassificationSymmetryprotected2018,wangConstructionClassificationSymmetryprotected2020,aasenCharacterizationClassificationFermionic2022,barkeshliClassification2+1DInvertible2022}. 
    \item From the viewpoint of field theory, an invertible quantum system is thought to give a character of the bordism group \cites{kapustinBosonicTopologicalInsulators2014,kapustinSymmetryProtectedTopological2014,kapustinFermionicSymmetryProtected2015}. This idea is made more sophisticated in terms of Anderson duality to the Freed--Hopkins ansatz \cites{freedShortrangeEntanglementInvertible2014,freedInvertiblePhasesMatter2020,debrayInvertiblePhasesMixed2021,freedReflectionPositivityInvertible2021,yamashitaDifferentialModelsAnderson2023,gradyDeformationClassesInvertible2023}. 
    \item Recently, some research groups have achieved results in more functional-analytic formulation of quantum spin systems. We refer to
\cites{ogataClassificationGappedHamiltonians2019,ogataMathbb_2Index2020,ogataClassificationPureStates2021,ogataValuedIndexSymmetryprotected2021,ogataIndexSymmetryProtected2021,bourneClassificationSymmetryProtected2021,ogataGeneralLiebSchultzMattisType2021,ogataInvariantSymmetryProtected2022,ogata2DFermionicSPT2023}, \cites{bachmannQuantizationConductanceGapped2018,bachmannManybodyIndexQuantum2020,bachmannClassificationGchargeThouless2023,jappensSPTIndicesEmerging2024,carvalhoClassificationSymmetryProtected2024} and \cites{kapustinHigherdimensionalGeneralizationsBerry2020,kapustinHigherdimensionalGeneralizationsThouless2020,kapustinClassificationInvertiblePhases2021,sopenkoIndexTwodimensionalSPT2021,kapustinLocalNoetherTheorem2022,artymowiczQuantizationHigherBerry2023,kapustinAnomalousSymmetriesQuantum2024,artymowiczMathematicalTheoryTopological2024}.
\item The topology of Hamiltonians coming from matrix product operators has also studied actively in, e.g.,  \cites{ohyamaGeneralizedThoulessPumps2022,ohyamaHigherStructuresMatrix2024,shiozakiHigherBerryCurvature2023,ohyamaHigherBerryPhase2024,ohyamaHigherBerryConnection2024} and \cites{qiChartingSpaceGround2023,wenFlowHigherBerry2023,sommerHigherBerryCurvature2024,sommerHigherBerryCurvature2024a,beaudryClassifyingSpacePhases2025}. 
\end{enumerate}
We also refer the readers to \cites{beaudryHomotopicalFoundationsParametrized2023,spiegelWeakContractibilitySpace2024}, which deals with the topology of quantum spin systems from the viewpoint of operator algebra but with an approach different from this paper.

\subsection{Functional-analytic realization of Kitaev's conjecture}\label{subsection:intro1}
The first and primary goal in the present paper is to implement Kitaev's conjecture in a rigorous mathematical formulation of lattice quantum systems, \emph{quantum spin systems}, described in terms of functional analysis and operator algebra. Standard references of this subject are \cites{bratteliOperatorAlgebrasQuantum1987,bratteliOperatorAlgebrasQuantum1997,evansQuantumSymmetriesOperator1998}. This follows the line of (3) in the above list.

\begin{introthm}[{\cref{cor:spectrum,cor:fermionic.equivariant.spectrum}}]\label{thm:main1}
    There is an $\Omega$-spectrum $\{ \IP_d, \kappa_d\}_{d \in \bZ_{\geq 0}}$ such that the group $\rIP^d(\sM)\coloneqq [\sM, \IP_d]$ is isomorphic to the set of smooth homotopy classes of smooth families of invertible gapped uniformly almost local (IG UAL) Hamiltonians on a lattice of $\bR^d$ (in the sense of \cref{defn:lattice.set}). 
    Similarly, there also exist the $\Omega$-spectra of fermionic and on-site $G$-symmetric IG UAL Hamiltonians $\fIP_d$ and $\IP_d^G$ for any compact Lie group $G$. 
\end{introthm}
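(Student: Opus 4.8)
The plan is to take $\IP_d$ to be the simplicial set (or its realization) of invertible gapped UAL Hamiltonians on lattices of $\bR^d$ set up in \cref{defn:lattice.set}, so that the identification of $[\sM,\IP_d]$ with smooth homotopy classes of smooth $\sM$-families of IG UAL Hamiltonians holds essentially by construction; the content of the theorem is to manufacture structure maps $\kappa_d\colon\IP_d\to\Omega\IP_{d+1}$ and prove each is a weak homotopy equivalence. The formal half of this is quick: stacking $\boxtimes$ makes $\IP_d$ a grouplike $E_\infty$-space with the atomic Hamiltonian as unit, grouplikeness being immediate since invertibility of every object is built into the definition of $\IP_d$, so the group-completion map $\IP_d\to\Omega B\IP_d$ into the loop space of the bar construction is a weak equivalence. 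Moreover, although $\IP_{d+1}$ is not connected — $\pi_0\IP_{d+1}$ is the group of $(d+1)$-dimensional invertible phases — translation by $\boxtimes$ (using invertibility) identifies all of its path components with the basepoint component $\IP_{d+1}^{\circ}$, and the loop space only sees that component, so $\Omega\IP_{d+1}=\Omega\IP_{d+1}^{\circ}$. It therefore suffices to produce a weak equivalence $B\IP_d\xrightarrow{\ \sim\ }\IP_{d+1}^{\circ}$ and let $\kappa_d$ be the resulting composite $\IP_d\xrightarrow{\ \sim\ }\Omega B\IP_d\xrightarrow{\ \sim\ }\Omega\IP_{d+1}$.

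\textbf{Geometric comparison.} I would build this equivalence by a spreading construction along the new coordinate axis: a nondegenerate $n$-simplex of the bar construction is a chain $(\sfH_1,\dots,\sfH_n)$ of $d$-dimensional IG UAL systems, and a point of $\Delta^n$ determines heights $0<t_1\le\dots\le t_n<1$; send this datum to the $(d+1)$-dimensional IG UAL system on $\bR^d\times\bR$ that places $\sfH_i$ as a floating $d$-dimensional layer near $x_{d+1}=t_i$ and the atomic Hamiltonian everywhere else. A collision $t_i=t_{i+1}$ yields the layer $\sfH_i\boxtimes\sfH_{i+1}$, matching the face maps of the bar construction, so the assignment descends to a map $\Phi_d\colon B\IP_d\to\IP_{d+1}^{\circ}$ (equivalently, $\kappa_d$ is represented by the geometric ``sweep'' that drags a single $d$-dimensional layer across $\bR$ from $-\infty$ to $+\infty$).

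\textbf{The main task.} The heart of the proof is showing $\Phi_d$ is a weak equivalence, for which I would exhibit an inverse ``disentangling'' construction: using quasi-adiabatic (spectral-flow) continuation together with Lieb--Robinson bounds, any $(d+1)$-dimensional IG UAL system in the basepoint component is continuously deformed — through IG UAL systems and smoothly in the parameter $\sM$ — so as to pull its nontrivial correlations apart into finitely many well-separated $d$-dimensional slabs stacked along the new axis, and one records the resulting sequence of slab systems. An excision/locality lemma — a $(d+1)$-dimensional IG UAL system agreeing with the atomic one outside a slab $\bR^d\times[0,N]$ is, up to a contractible space of choices, the trivial extension of its restriction to the slab — then shows this is well-defined modulo the bar-construction relations and inverts $\Phi_d$ up to homotopy; since $\Omega B\IP_d\simeq\IP_d$, this gives the $\Omega$-spectrum.

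\textbf{Variants and main obstacle.} The fermionic spectrum $\fIP_d$ and the on-site $G$-equivariant spectrum $\IP_d^G$ for a compact Lie group $G$ are produced by rerunning the same argument with $\bZ/2$-graded, resp.\ $G$-equivariant, UAL Hamiltonians; every ingredient — the atomic system, stacking, spreading, disentangling, excision — can be carried out respecting the grading and the on-site action, because these commute with spatial translations and with the local cut-and-paste surgeries used, so no new idea is needed. The main obstacle is the disentangling/excision pair: proving that a gapped invertible $(d+1)$-dimensional UAL Hamiltonian can be separated along one direction into lower-dimensional layers while preserving the spectral gap and the uniform almost-locality, with all constants controlled uniformly over the smooth family $\sM$, and that cutting and re-gluing across a hyperplane behaves well. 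This is exactly where the functional-analytic machinery — Lieb--Robinson bounds, quasi-adiabatic continuation, and careful uniform bookkeeping of almost-local interaction ranges — has to be deployed; by contrast, the $E_\infty$-/group-completion and loop-space manipulations are purely formal.
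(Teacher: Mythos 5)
Your formal reduction (grouplike $E_\infty$-space, group completion $\IP_d\simeq\Omega B\IP_d$, spreading map $B\IP_d\to\IP_{d+1}^\circ$) is a genuinely different scaffold from the paper's: the paper never invokes the bar construction or the group-completion theorem, but instead defines the Kitaev pump $\kappa_d\colon\bsIP_d\to\Omega\sIP_{d+1}$ directly as a loop pairing $\sfH$ with an inverse $\check{\sfH}$ and fusing them in two ways, and shows it is a weak equivalence by factoring through a ``switching'' sheaf $\cS\bsIP_{d+1}$ (\cref{lem:switch.loop.equivalence,lem:weak.equivalence.stick,lem:switch.constant.equivalence,lem:switch.diagram.commute}). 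Your route would also need the full $E_\infty$-coherence of $\boxtimes$, which the paper deliberately sidesteps by working only with the weaker notion of local commutative H-monoid (\cref{defn:local.commutative.Hmonoid}, \cref{rmk:Ainfty}); that coherence is presumably constructible but is extra work.

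The gap is in the disentangling step. You claim that any $(d+1)$-dimensional IG UAL system in the basepoint component can be deformed, smoothly over $\sM$, into \emph{finitely many} well-separated $d$-dimensional slabs, using quasi-adiabatic continuation and Lieb--Robinson bounds. Adiabatic continuation along a null-homotopy $\sfh\simeq\sfH$, truncated to a half-space as in \cref{thm:interpolation.loop}, produces a system that looks trivial on one side of a hyperplane---\emph{half-space} support, not slab support. Pushing the other side in as well requires already knowing that half-space-supported systems are in the trivial phase. That is precisely the content of \cref{prp:Eilenberg.swindle}: the subsheaf $\bsIP_d(\blank)_R$ of systems supported on $Y_R=\{x_d\geq 0\}$ is weakly contractible, proved by an Eilenberg swindle that absorbs the given system into an infinite shear-stacked copy of the truncated Kitaev pump. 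Your proposal never mentions this swindle, and without it the finite-slab claim does not follow from Lieb--Robinson and spectral flow alone. Moreover, the swindle is exactly why the paper needs the relaxed notion of a $d$-dimensional lattice (\cref{defn:lattice.set}, with $\pr_{\bR^d}$ linearly proper but the embedding living in $\bR^{d+l}$): the infinite sheared stack $\widetilde{\Lambda}$ is not a $d$-dimensional lattice in the naive sense, and the paper stresses (\cref{subsection:intro1}, point (3)) that without this relaxation the whole $\Omega$-spectrum claim fails, as the free-fermion/uniform-Roe-algebra analogue shows. Your excision lemma is essentially \cref{lem:weak.equivalence.stick} and is fine; the disentangling step is where you would be forced to rediscover the Eilenberg swindle and the virtual-dimension device, and until you do the argument is incomplete.
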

The statement will be more precise in \cref{subsection:sheaf.lattice}. 
As consequences of \cref{thm:main1}, we can freely apply elementary computational techniques in homotopy theory to determine the topology of the spaces $\IP_d$, $\fIP_d$, and $\IP_d^G$. 
This reproduces many known results without a direct and complicated analysis, as summarized in \cref{section:homotopy}. 

The proof of \cref{thm:main1} basically follows the line of Kitaev's original argument and its reinterpretations given in \cites{xiongMinimalistApproachClassification2018,gaiottoSymmetryProtectedTopological2019}. We construct the \emph{Kitaev pump} map $\kappa_d \colon \IP_d \to \Omega \IP_{d+1}$ in \cref{subsection:Kitaev}, and the \emph{adiabatic interpolation} map $\vartheta_d \colon \Omega \IP_{d+1} \to \IP_d$ in \cref{subsection:adiabatic.interpolation}. 
We then show that they are the weak homotopy inverse to each other.
However, some changes from the existing argument are necessary due to the subtleties of the analysis of quantum spin systems.

\medskip 

\paragraph{(1) Analysis of the gapped interpolation}
An essential step of Kitaev's argument is the construction of $\vartheta_d$, which assign to a smooth path of gapped Hamiltonians $\{ \sfH(t)\}_{t \in [0,1]}$ an invertible gapped defect interpolating $\sfH(0)$ and $\sfH(1)$. 
If one tries a simple construction, a gradual linear combination of varying Hamiltonians, one will have a problem in proving that the resulting interpolation has a spectral gap. 
Indeed, the analysis of quantum spin systems is much more delicate than the single-particle Hamiltonian in the free fermion theory, and the spectral gap is easily violated by small perturbations. 
It has become clear in recent years, e.g.\ by \cites{ogataValuedIndexSymmetryprotected2021,sopenkoIndexTwodimensionalSPT2021}, that this difficulty can be resolved by using Hastings' \emph{adiabatic theorem} or the \emph{automorphic equivalence}  \cites{hastingsLiebschultzmattisHigherDimensions2004,hastingsQuasiadiabaticContinuationQuantum2005,bachmannAutomorphicEquivalenceGapped2012,nachtergaeleQuasilocalityBoundsQuantum2019,moonAutomorphicEquivalenceGapped2020}. 
This theorem claims that, for a smooth path $\{ \sfH(t)\}_{t \in [0,1]}$, there is a smooth family of derivations $\sfG_{\sfH}(t)$ on the observable algebra whose (time-dependent) evolution carries the ground state of $\sfH(0)$ to that of $\sfH(1)$. 
Now, by carrying $\sfH(0)$ via the time evolution with respect to the truncation of this $\sfG_{\sfH}$ to the right half space, we get a new Hamiltonian that looks like $\sfH(0)$ at the left half space and $\sfH(1)$ at the right half space (\cref{thm:interpolation.loop}). 

In order to apply this technique, the class of Hamiltonians needs to be chosen carefully. 
It must be closed under a class of automorphisms, i.e., evolutions generated by derivations satisfying some locality assumption. Hence, it needs to contain infinite-range Hamiltonians. 
On the other hand, a Hamiltonian $\sfH$ needs to satisfy a strong locality assumption; $\sfH$ is represented by the infinite sum of local generators $\sfH_{\bm{x}}$ indexed by lattice points, each of which is required to decay faster than exponentials of fractional powers.
In \cref{subsection:quantum.spin.system}, we deal with this locality assumption in terms of the Fr\'{e}chet seminorms on the observable algebra following the line of  \cites{kapustinLocalNoetherTheorem2022,artymowiczQuantizationHigherBerry2023}. 

\medskip

\paragraph{(2) Difference of continuous and smooth homotopies} 
There is another requirement for applying the adiabatic theorem. 
Since the derivation $\sfG_{\sfH}$ is constructed only for smooth (differentiable) families of Hamiltonians, we must work in the smooth category dealing with smooth maps and their smooth homotopies. 
This requirement is incompatible with the framework of stable homotopy theory, which is built on the category of topological spaces. 
Indeed, it is not clear whether the space of gapped Hamiltonians is an infinite dimensional (Fr\'{e}chet) manifold, and even if it were known, it does not mean that any continuous homotopy can be approximated by a smooth one.
A framework of spaces that meets such a requirement is \emph{diffeological spaces}, or almost equivalently, \emph{smooth sets} or \emph{sheaves on the category $\Man$ of manifolds}. 

A diffeological space is a set $X$ equipped with a distinguished collection of smooth maps $C^\infty(\bD^n,X) \subset \Map(\bD^n,X)$, which satisfies the contravariant functoriality with respect to smooth maps and the gluing condition over open covers (a standard reference is \cite{iglesias-zemmourDiffeology2013}). 
These conditions are reworded to say that the assignment $\bD^n \mapsto C^\infty(\bD^n,M)$ is a concrete sheaf on the category of open disks.
Extending the domains of smooth maps to all manifolds and forgetting the underlying set $X$, one arrives at the definition of a sheaf on $\Man$. 
According to the theory of Madsen--Weiss \cite{madsenStableModuliSpace2007}, a sheaf $\sF$ on $\Man$ corresponds to a topological space $|\Sing \sF|$, the geometric realization of the smooth singular set, which remembers the smooth homotopy type of $\sF$. More precisely, for any manifold $\sM$, the continuous homotopy set $[\sM,|\Sing \sF|]$ is isomorphic to the smooth homotopy classes of smooth maps from $\sM$ to $\sF$. 
Since we also consider settings with on-site symmetry given by a compact Lie group, we need an equivariant version of this construction, which we discuss in \cref{section:Gsheaf}.
In \cref{section:sheaf.IP}, we define the space $\IP_d$ as the geometric realization of a sheaf $\sIP_d$ of smooth families of IG UAL Hamiltonians. 

\medskip

\paragraph{(3) Uniformity of the internal degrees of freedom}
While the homotopy $\vartheta_d \circ \kappa_d \simeq\id$ is given directly, the converse homotopy $\kappa_d \circ \vartheta_d \simeq\id$ is not as straightforward as it.
By comparing with the corresponding free fermion theory \cite{kubotaControlledTopologicalPhases2017}, we notice that this difficulty is essential and comes from the uniformity of the internal degrees of freedom. 
Indeed, the homotopy classes of $1$-dimensional complex free fermions, short-range bounded invertible self-adjoint operators on $\ell^2(\bZ,\bC^N)$ for some $N>0$, are classified by the K-theory of the uniform Roe algebra. 
This group is known to have uncountably many basis elements (\cite{spakulaKtheoryUniformRoe2008}*{Example 3.4}) coming from atomic Hamiltonians. 
By the Bott periodicity, Kitaev's conjecture breaks down in this formulation.  
On the other hand, as was observed by Higson--Roe--Yu \cite{higsonCoarseMayerVietorisPrinciple1993} in the context of coarse index theory, the conjecture is revived by relaxing the assumption of the Hilbert space of lattice wave functions so that the internal degrees of freedom $N$ depends on the lattice point $\bm{x}$ without any uniform bound (see e.g.\ \cite{higsonAnalyticHomology2000}*{Theorem 6.4.10}).
The heart of this version of Kitaev's conjecture is that the free fermion topological phase on the half Euclidean space $\bR^d_+=\bR^{d-1}\times \bR_{\geq 0}$ is trivial, and hence all atomic Hamiltonians are in the trivial phase. 
Its proof is given by a kind of Eilenberg swindle (i.e., an `$\infty + 1 =\infty$' type) argument.

When one tries to apply this idea to Kitaev's conjecture for quantum spin systems, one encounters an analytic problem: The local generator $\sfH_{\bm{x}}$ of the Hamiltonian appearing in the Eilenberg swindle argument becomes a sum of a large number of operators depending on the point, and hence their operator norms have no uniform upper bound. 
This causes a problem in the analysis of Hastings' adiabatic theorem.
In this paper, although it is a somewhat artificial assumption, we prepare a `virtual dimension' on the physical space and regard a uniformly discrete subset of $\bR^{d+l}$ for some $l\geq 0$ as a $d$-dimensional lattice if the projection $\pr_{\bR^d} \colon \Lambda \subset \mathbb{R}^{d+l} \to \mathbb{R}^d$ is linearly proper (\cref{defn:polynomially.proper}). 

\subsection{Computations of homotopy groups and SPT phases}
Once the $\Omega $-spectrum $\IP_d$ is defined, the next question is to determine its homotopy groups. 
The homotopy groups of $0$-th space $\IP_0$ is easily determined by the homotopy equivalence $\IP_0 \simeq\bC \bP^\infty $. 
The cases of fermionic and symmetry-protected versions are also determined similarly. 
As a consequence of \cref{thm:main1}, this determines the homotopy groups $\pi_{n+d}(\IP_d)$ of $d$-dimensional invertible spin systems, or equivalently, the stable homotopy groups $\pi_n^{\mathrm{st}}(\IP)$ for $n \geq 0$.
In particular, they vanish for $n \geq 3$, that is, the $\Omega$-spectrum $\IP_d$ is $2$-truncated.  
The remaining task is to determine the negative homotopy group $\pi_{-d}(\IP) \coloneqq \pi_0(\IP_d)$.

In \cref{subsection:homotopy.higher}, we prove the isomorphisms 
\[ 
    \pi_0(\IP_1) \cong 0, \quad \pi_0(\fIP_1) \cong \bZ/2.
\]
The isomorphisms of these kinds are already known in the prior works by Ogata \cite{ogataClassificationPureStates2021}, Kapustin--Sopenko--Yang \cite{kapustinClassificationInvertiblePhases2021}, Kapustin--Sopenko \cite{kapustinAnomalousSymmetriesQuantum2024}, and Carvalho--de Roeck--Jappens \cite{carvalhoClassificationSymmetryProtected2024} for the bosonic case, and Bourne--Ogata \cite{bourneClassificationSymmetryProtected2021} for the fermionic case.
Our proof, whose scope contains invertible phases on lattices that are $1$-dimensional in our relaxed sense, are essentially the same as the one of \cite{carvalhoClassificationSymmetryProtected2024}. 
In particular, it does not rely on Matsui's work \cite{matsuiBoundednessEntanglementEntropy2013} on the finiteness of the entanglement entropy of $1$-dimensional gapped Hamiltonians, which is bypassed by using the assumption of invertibility. 

On the other hand, our knowledge of the lower negative homotopy groups is restrictive. 
For example, $\pi_0(\IP_2) =\pi_{-2}(\IP)$ is conjectured to be isomorphic to $\bZ$, to be classified by the (as yet undefined) integer called the chiral central charge, and to be generated by Kitaev's $E_8$-state \cite{kitaevAnyonsExactlySolved2006}. A candidate of this $E_8$-phase is recently proposed by Sopenko \cite{sopenkoChiralTopologicallyOrdered2023}.

However, even with our limited knowledge, we can yield partial results for the problem of determining the groups of SPT phases. In our framework, it is possible to consider the following homomorphisms;
\[
    \mathrm{Ind}_G^d \coloneqq \pr_{EG}^* \colon \pi_0(\IP_d^G) =[\pt,\IP_d]^G \to [EG, \IP_d]^G \cong [BG, \IP_d]. 
\]
The group appearing on the right hand side has been proposed to give the classification of SPT phases. 
For example, this claim is referred to as the \emph{generalized cohomology hypothesis} in \cite{xiongMinimalistApproachClassification2018}.
This group is computable by methods of algebraic topology like the Atiyah--Hirzebruch spectral sequence. Moreover, it is related to the group cohomology in dimensions $d \leq 2$. 
Moreover, as is listed in \cref{exmp:list}, the map $\mathrm{Ind}_G^d$ recovers most topological invariants known so far. For this reason, we adopt the same terminology and refer to it as the \emph{SPT index} as well. 

According to a direct computation of $\pi_0(\IP^G_1)$ by the same argument as \cref{subsection:homotopy.higher}, the map $\mathrm{Ind}_G^1$ is an isomorphism. 
Even if $d >1$, certain information about the right hand side can be obtained. 
First, the two constructions of models, the lattice Dijkgraaf--Witten model and Araki's quasi-free quantization, show the split surjectivity of $\mathrm{Ind}_{G,\phi}^2$ in some cases. These models will be discussed in the subsequent papers \cites{kubotaStableHomotopyTheory2025a,kubotaStableHomotopyTheory2025b}. 
Second, if $G$ is a (connected) Lie group, then the existence of the Chern--Dold character isomorphism (\cref{subsection:Chern.Dold}) shows that the groups $[BG,\IP^d] \otimes \bQ$ contains a direct summand isomorphic to $\mathrm{H}^{d+2}(BG\,; \bQ)$. 
For example, when $G=\bT$, Kapustin--Sopenko \cite{kapustinLocalNoetherTheorem2022} studies the resulting invariant as an interacting model of the integer quantum Hall effect.

\subsection{Models of generalized homology theory}
The second theme of this paper is to advance the \emph{coarse homotopy theoretic} perspective of the invertible gapped topological phases found in the proof of \cref{thm:main1}, taking the coarse index theory as a guide.  
As a result, we provide a realization of another proposal by Kitaev \cites{kitaevClassificationShortrangeEntangled2013,kitaevHomotopytheoreticApproachSPT2015}, ``the invertible phases of quantum systems defined on $X$, viewed as a physical space, give a homology theory of $X$'', which has been accepted in the context of invertible field theory. We refer to \cite{freedInvertiblePhasesMatter2020}*{Ansatz 2.1}, \cite{debrayInvertiblePhasesMixed2021}*{Ansatz1.10}, and \cite{shiozakiGeneralizedHomologyAtiyah2023}.

Kitaev \cites{kitaevAnyonsExactlySolved2006,kitaevPeriodicTableTopological2009} initially noted that the K-theory classification of free fermion phases reflects the large-scale shape of the Euclidean space $\bR^d$. 
This idea was developed in the author's previous work \cite{kubotaControlledTopologicalPhases2017}. Here, the free fermion topological phases are formulated using the framework of coarse index theory, a branch of noncommutative geometry dealing with the large-scale nature of complete Riemannian manifolds and their Dirac type operators with application to geometry and differential topology. 
Standard references of this research field are  \cites{higsonAnalyticHomology2000,roeLecturesCoarseGeometry2003,willettHigherIndexTheory2020}. 
The coarse structure of a metric space $X$ is given by the family of subsets of $X \times X$ abstracting the uniform neighborhood $N_R(\Delta_X) $ of the diagonal set, which matches well with the concept of short-range property of Hamiltonians in the first place. 
This connection is a natural candidate for studying interacting invertible phases and has already appeared in various ways in the study of quantum spin systems. 
As explained in \cref{subsection:intro1} (3), the proof of the bulk-boundary correspondence given in \cite{kubotaControlledTopologicalPhases2017} by the coarse Mayer--Vietoris exact sequence is parallel to our proof of \cref{thm:main1}.
Moreover, recent progress on the higher order Berry phase by Kapustin--Spodyneiko \cites{kapustinHigherdimensionalGeneralizationsBerry2020,kapustinHigherdimensionalGeneralizationsThouless2020} and Artymowicz--Kapustin--Sopenko \cites{kapustinLocalNoetherTheorem2022,artymowiczQuantizationHigherBerry2023} also takes into account the idea of coarse geometry, using the pairing of the coarse ordinary homology and cohomology groups \cite{roeCoarseCohomologyIndex1993} in the process of extracting numerical invariants. Elokl--Jones \cite{eloklUniversalCoarseGeometry2024} also considers the coarse geometric viewpoint of the topological phases of lattice quantum systems.

As an application of this concept, we introduce a model of a generalized homology theory 
\begin{align*}
    \rIP_n(X) \coloneqq \pi_n^{\mathrm{st}}(X_+ \wedge \IP) =\colim_{k \to \infty} \pi_{n+k}(X_+ \wedge \IP_k ).
\end{align*}
associated with the IP spectrum, formulated as the homotopy groups of the set of `quantum spin systems defined on a metric space $X$.'
It is motivated by a well-known that coarse index theory provides an operator algebraic realization of the K-homology group of a topological space $X$ by the following general idea. 
Since coarse structure does not distinguish small and large finite distances, a loop in $X$ with finite diameter is ignored, and the topology of $X$ in the ordinal sense is violated. 
In particular, a compact metric space is identified with the point. 
In quantum spin systems, the sheaf $\sIP(X)$ of IG UAL Hamiltonians on $X$ can be defined even if $X$ is beyond the Euclidean space, as is considered in \cites{eloklUniversalCoarseGeometry2024,artymowiczMathematicalTheoryTopological2024}. However, two such sheaves, $\sIP(X)$ and $\sIP(Y)$, are indistinguishable if $X$ and $Y$ are equivalent as coarse spaces. 
For example, the invertible phases on $X=\bR^d$ and $Y=\bZ^d$ are precisely the same. 
However, an invariant in coarse geometry has an analogue that replaces the `finite' length scale with `infinitesimal', reflecting the space's local topology and eventually providing a generalized homology theory. 
In coarse index theory, there are two ways to realize the infinitesimal version of the Roe algebra: Higson--Roe's C*-algebra $D^*X/C^*X$ \cite{higsonAnalyticHomology2000} and Yu's localization C*-algebra $C^*_{\loc}X$ \cite{willettHigherIndexTheory2020}, both of which represent the K-homology group. 
Here, we follow the latter to define a model of the IP-homology group. 

The localization C*-algebra $C^*_{\loc}X$ consists of continuous paths of bounded operators $\{H_s\}_{s \in [1,\infty)}$ such that each $H_s$ has finite range $R_s$ that decays as $s \to \infty$ (more precisely, $C^*_{\loc}X$ is the closure of the set of such operators). 
This idea is immediately paraphrased to quantum spin systems, which leads us to the definition of \emph{invertible gapped (IG) localization flows} of UAL Hamiltonians in \cref{defn:localizing.path}. 
A similar idea also appears in \cite{kitaevTopologicalQuantumPhases2019}.
The concept of IG localization flow resembles to that of the renormalization group flow in quantum field theory, which is where the name localized flow comes from, but the existence of the scaling limit as $s \to \infty$ is not assumed in any sense. 
The smooth family of IG localization flows forms a sheaf $\sIP_{\loc}(X)$ on $\Man$. 
On the Euclidean space $\bR^d$, there is a smooth homotopy of an IG UAL Hamiltonian and its rescaled version, which shows that $\sIP_{\loc}(\bR^d)$ is weakly equivalent to $\sIP_d$ (\cref{thm:scaleable}). 

By taking the geometric realization of the singular set, we get the $\Omega$-spectrum $\IP_{\loc}(X)$.
By letting $\IP_{\loc,d}(X) \coloneqq \IP_{\loc}(\boldsymbol{\Sigma}^dX)$, where $\boldsymbol{\Sigma} X$ denotes the coarse geometric suspension $ X \times \bR$, the family of spaces $\{ \IP_{\loc,d}(X)\}_{d \in \bZ_{\geq 0}}$ forms an $\Omega$-spectrum as well as \cref{thm:main1}. 
Moreover, the following holds.
\begin{introthm}[{\cref{thm:IP.bivariant,thm:Atiyah.duality}}]\label{thm:main2}
    Let $X$ be a CW-complex embedded in a Euclidean space (\cref{defn:MCW}). 
    Then the assignment $\rIP_{\loc,n}(X) \coloneqq  \pi_{n}(\IP_{\loc }(X))$ forms a generalized homology functor, which is naturally isomorphic to the IP-homology theory $\rIP_n(X)$. 
\end{introthm}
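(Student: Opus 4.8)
This is the combination of \cref{thm:IP.bivariant}, which records the homology-theoretic (``bivariant'') structure of $\rIP_{\loc}$, and \cref{thm:Atiyah.duality}, the comparison with the IP-homology theory, so I would prove these two in turn. For the first, the task is to verify the Eilenberg--Steenrod axioms for $\rIP_{\loc,*}$ on the category of CW-complexes embedded in a Euclidean space. Functoriality comes first: a continuous map $f\colon X\to Y$ of such complexes induces $\IP_{\loc}(X)\to\IP_{\loc}(Y)$ because the interaction range of a localization flow decays to $0$, so after replacing $f$ by a simplicial approximation and reparametrizing the scale $s$ the flow can be transported along $f$ even though $f$ is neither coarse nor proper; homotopy invariance follows by applying the same mechanism to $X\times[0,1]$. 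The additivity (wedge) axiom holds because a UAL Hamiltonian flow on a disjoint union of uniformly separated pieces splits as the ``product'' of flows on the pieces, the requisite uniformity being supplied by the uniform discreteness of \cref{defn:lattice.set} together with the virtual-dimension device.

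The substance of the first part is excision, which I would obtain by a localized coarse Mayer--Vietoris argument modeled on the proof of \cref{thm:main1}. Given a decomposition $X=A\cup_C B$, one shows that after geometric realization the square of sheaves assembled from $\sIP_{\loc}(C)$, $\sIP_{\loc}(A)$, $\sIP_{\loc}(B)$, and $\sIP_{\loc}(X)$ is a homotopy pushout: for $s$ large enough a localization flow on $X$ has interaction range smaller than the width of the overlap, hence decomposes into a flow that only sees $A$ and one that only sees $B$, agreeing over $C$, and conversely such pairs glue back together. The main obstacle of the whole theorem is the analysis here --- keeping the spectral gap open, the UAL seminorm estimates uniform, and invertibility intact throughout the cutting and regluing, exactly as in the interpolation construction of \cref{thm:interpolation.loop}.

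For the comparison with $\rIP_n(X)=\pi_n^{\mathrm{st}}(X_+\wedge\IP)$, I would build a natural transformation by a ``localization'' (spreading) map: using the identification $\IP\simeq\IP_{\loc}(\pt)$ coming from \cref{thm:scaleable}, an IG localization flow ``on a point'' together with a point $x\in X$ determines an IG localization flow on $X$ concentrated near $x$, and letting both vary yields a natural map of spectra $X_+\wedge\IP\to\IP_{\loc}(X)$. By the first part both sides are generalized homology theories on embedded CW-complexes and this transformation commutes with filtered colimits, so it suffices to check it is an isomorphism on a point. There \cref{thm:scaleable} gives $\IP_{\loc}(\boldsymbol{\Sigma}^d\pt)=\IP_{\loc}(\bR^d)\simeq\IP_d$, whence $\rIP_{\loc,*}(\pt)=\pi_{*+d}(\IP_{\loc}(\bR^d))\cong\pi_{*+d}(\IP_d)=\pi_*^{\mathrm{st}}(\IP)=\rIP_*(\pt)$ and the transformation is the identity by construction; the comparison theorem for homology theories --- induction over cells using excision and the five lemma, then passage to filtered colimits, all of which remains within the category of embedded CW-complexes --- then gives the isomorphism for every $X$.

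A more geometric route to this last step, and presumably the role of \cref{thm:Atiyah.duality} itself, is to exploit the embedding directly: fixing $X\subset\bR^{d+l}$ with the help of the virtual dimension, the coarse-to-fine trade-off recorded in a localization flow on $X$ is governed by the infinitesimal coarse geometry of a tubular neighborhood, which identifies $\IP_{\loc}(X)$ with the mapping spectrum $F(D(X_+),\IP)$ into the IP-spectrum, where $D(X_+)\simeq\Sigma^{-(d+l)}\mathrm{Th}(\nu_X)$ is the Spanier--Whitehead dual of $X_+$; Atiyah--Spanier--Whitehead duality then rewrites $F(D(X_+),\IP)$ as $X_+\wedge\IP$, canonically in $X$. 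Either way, the crux remains the excision / cut-and-paste step, the duality serving only to make the final identification natural.
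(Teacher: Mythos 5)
Your overall plan -- verify the Eilenberg--Steenrod axioms for $\rIP_{\loc,*}$ and then compare with $\rIP_*$ by a duality plus a comparison theorem for homology theories -- is the same as the paper's, and your ``more geometric route'' at the end is essentially what the paper actually does: Theorem~\ref{thm:Atiyah.duality} constructs an explicit morphism $\Gamma^{(l)}_X\colon \sIP_{\loc}[\bR^l\mid\sM\times\sD_X]\to\sIP_{\loc}[X\mid\sM]$ from the Spanier--Whitehead dual pair $\sD_X=(\overline U_X,\partial\overline U_X)$ by iterating a ``gradation'' map built from the truncated parallel transports $\gamma_{[0,1]}$, and then runs a Mayer--Vietoris cell induction to show it is an isomorphism, starting from $X=\pt$ where $\Gamma^{(l)}_\pt$ is $l$ compositions of the adiabatic interpolation $\vartheta_{\loc}$. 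Your proposed ``spreading map'' $X_+\wedge\IP\to\IP_{\loc}(X)$ is a plausible alternative, but you never say how to build it as a genuine map of sheaves or spectra, so as written the explicit duality route is the one that closes the argument.

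The genuine gap is in the excision step. You assert that ``for $s$ large enough a localization flow on $X$ has interaction range smaller than the width of the overlap, hence decomposes into a flow that only sees $A$ and one that only sees $B$, agreeing over $C$.'' A gapped Hamiltonian with short interaction range does not decompose into a tensor product across a cut: small interaction range controls the local generators, not the ground state, which generically has entanglement across $C$, so naive truncation destroys the spectral gap. The paper's excision (Proposition~\ref{prp:excision}) does not decompose a flow into pieces. It first applies the truncated Kitaev pump $\kappa_X^{\loc,R}(\bsfH)$ to produce a null-homotopy, then transports $\bsfH$ by the adiabatic parallel transport $\alpha(\Pi^\star_{\mathbf{C}^2B^c}(\sfG_{\kappa^{\loc,R}_X(\bsfH)})\,;t)$ so that the ground state becomes trivial outside $\mathbf{C}B$, and only \emph{then} truncates with $\Theta_{\mathbf{C}B^c}$; this whole construction $\fT^t_{X,A}$ is what actually moves the support into $B$ without closing the gap, and it is the analytic core of the theorem. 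Your framing of this as an ordinary cut-and-paste lemma skips the mechanism entirely. (A smaller point: functoriality of $\IP_{\loc}$ is only for \emph{Lipschitz} continuous linearly proper maps, since the localization seminorms $\vvert\cdot\vvert_{\loc,\sU,C^k,f}$ degrade under a general continuous map; that a general continuous map between embedded finite CW-complexes can be replaced by a Lipschitz one is \cref{rmk:embed.CWfin.EucCW}, which you should cite rather than invoke simplicial approximation without justification.) Similarly, the long exact sequence of a pair is not free: the paper realizes it as the homotopy exact sequence of the fibration $\ev_1\colon\cS\bsIP_{\loc}(\mathbf{D}(X,A))\to\bsIP_{\loc}(X,A)$, and you would need some such fibration or cofiber sequence; your proposal does not supply one.
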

As a consequence, the assignment
\[
    (X,\sY) \mapsto \rIP_{\loc,d}(X \midbar \sY) \coloneqq [\sY, \IP_{\loc,d}(X)] 
\]
forms a bivariant homology theory, a bifunctor that is homological with respect to the first variable and cohomological with respect to the second variable.

The proof of \cref{thm:main2} is given in the following way. 
First, we extend the definition of $\IP_{\loc,n}(X)$ to its relative version $\{ \IP_{\loc,n}(X,A)\}_n$ defined for a pair of metric spaces $(X,A)$. 
Then, we show that the assignment $(X,A ) \mapsto \pi_n(\IP_{\loc}(X,A))$ satisfies the axioms of generalized homology theory. 
Finally, the isomorphism $\rIP_{\loc,n}(X) \cong \rIP_n(X)$ is given by a standard Mayer--Vietoris argument by identifying the homology group $\rIP_n(X)$ with the cohomology group of the Spanier--Whitehead dual $\sD_X$ of $X$.

\subsection{Invertible phases with spatial symmetry and the assembly map}
The concept of IG localization flow on a general metric space has an application to the study of topological phases protected by a spatial symmetry given by a crystallographic group $\Gamma$, i.e., a discrete group of the rigid motion group $\bR^m \rtimes O(m)$ acting properly and cocompactly on $\bE_{\Gamma} \coloneqq \bR^m$. 
In \cref{section:assembly}, we give an appropriate formulation of the $\Gamma$-invariant topological phases of IG localization flows on a proper $\Gamma$-space $X$.
We then give a rigorous proof of another proposal that, stating that they are classified by the $\Gamma$-equivariant IP-homology group $\rIP_{\loc,0}^\Gamma(X)$, which is referred as the generalized cohomology hypothesis in \cite{xiongMinimalistApproachClassification2018} (see also \cite{freedInvertiblePhasesMatter2020}*{Ansatz 3.3} and \cite{debrayInvertiblePhasesMixed2021}*{Ansatz 0.1}).

For a technical reason that a $\Gamma$-manifold may not be a $\Gamma$-CW-complex for non-compact $\Gamma$, we do not formulate the crystallographic symmetry of invertible phases as $\Gamma$-equivariant geometric realizations of $\Gamma$-sheaves.
A crystallographic group $\Gamma$ has the finite index normal subgroup $N \coloneqq \Gamma \cap \bR^m$ consisting of translations. 
For a nice metric space $X$ with a proper $\Gamma$-action (\cref{defn:MCW.equivariant}), we consider the associated $\Gamma$-action on sheaves of $N$-invariant IG UAL Hamiltonians $\sIP_d^N(X)$ and IG localization flows $\sIP_{\loc,d}^N(X)$.
The proof of \cref{thm:main1} works to this equivariant setting, and we obtain the $\Gamma/N$-equivariant $\Omega$-spectra $\{ \IP_d^N(X)\}_d$ and $\{ \IP_{\loc,d}^N(X) \}_d$ (\cref{defn:spatial.equivariant.IP.cohomology}) whose associated equivariant cohomology functors
\[
    \rIP_{d}^\Gamma (X \midbar \sM) \coloneqq  [\sM,\IP_d^N(X)]^{\Gamma/N} , \quad \rIP_{\loc,d}^\Gamma (X \midbar \sM) \coloneqq [\sM,\IP_{\loc,d}^N(X)]^{\Gamma/N} 
\] 
remember the $\Gamma$-equivariant smooth homotopy sets of the $\Gamma$-sheaves $\sIP_d(X)$ and $\sIP_{\loc,d}(X)$ respectively.

Under this formulation, we compare two $\Gamma/N$-equivariant $\Omega$-spectra of IG localization flows from $X$; $\IP_{\loc}^N(X)$ and $\IP_{\loc}(X/N)$. 
In \cref{thm:covering}, we prove that they are equivariantly weakly equivalent, and hence, the associated equivariant bivariant homology theories are isomorphic as
\[
    \rIP_{\loc,n}^\Gamma(X \midbar \sY ) \cong \rIP_{\loc,n}^{\Gamma/N}(X/N \midbar \sY ).
\]
This isomorphism corresponds to a well-known fact in equivariant K-homology theory (for example, it is noted in \cite{baumClassifyingSpaceProper1994}*{(3.12)}). 
The latter groups are computable using the homological Mayer--Vietoris exact sequences or, more systematically, the Atiyah--Hirzebruch spectral sequence. 

The primary focus is the group $\rIP_0^\Gamma(\bE_{\Gamma} )\coloneqq \rIP_0^\Gamma(\bE_{\Gamma} \midbar \pt)$, the $\pi_0$-group of the set of $\Gamma$-invariant IG UAL Hamiltonians on the Euclidean space $\bE_\Gamma$. 
To achieve this, we compare it with another group  $\rIP_{\loc,0}^\Gamma(\bE_{\Gamma})$, which is better understood through the above isomorphism, using the map forgetting the information of localization flows at $s>1$. 
In the context of our guide, the free fermion theory, this map is known as the equivariant \emph{coarse assembly map} comparing the K-groups of the Roe and localization algebras.
When $\Gamma$ is trivial, it was conjectured that the coarse assembly map is an isomorphism for any uniformly contractible metric spaces, which is known as the coarse Baum-Connes conjecture \cites{higsonCoarseBaumConnesConjecture1995,yuCoarseBaumConnesConjecture1995}. Although there is a counterexample  (\cite{higsonCounterexamplesBaumConnesConjecture2002}), it is known to hold for a broad class of metric spaces that contains Euclidean spaces \cite{yuCoarseBaumConnesConjecture2000}. 
Among the existing proofs of this conjecture, the one of \cites{higsonCoarseBaumConnesConjecture1995,qiaoLocalizationAlgebraGuoliang2010} is directly paraphrased to IP cohomology theory, which is nothing but the weak equivalence $\IP_{\loc ,m} \simeq\IP_m$ that is already mentioned. 
When $\Gamma$ is non-trivial and $\Gamma \curvearrowright X$ is proper cocompact, the corresponding isomorphism conjecture is nothing but the Baum--Connes conjecture \cite{baumClassifyingSpaceProper1994}, which is known to hold for a broad class of groups that contains crystallographic groups \cite{higsonTheoryKKTheory2001}. 
A coarse geometric approach to this problem is the proof of the split injectivity that is addressed in \cite{roeIndexTheoryCoarse1996}*{Theorem 8.4}: The coarse Baum--Connes isomorphism implies the split injectivity of the Baum--Connes assembly map. 
This idea is immediately paraphrased to those of IP cohomology theory.

In conclusion, we obtain the following theorem.
\begin{introthm}[{\cref{thm:covering,thm:BCI}}]\label{thm:main3}
    Let $\Gamma$ be a crystallographic group. 
    Then the group $\rIP_n^\Gamma(\bE_{\Gamma})$ of homotopy classes of $\Gamma$-invariant IG UAL Hamiltonians on $\bE_{\Gamma}$ has a direct summand isomorphic to $\rIP_{\loc,n}^\Gamma(\bE_{\Gamma}) \cong \rIP_{\loc,n}^{\Gamma/N}(\bE_{\Gamma}/N)$. Moreover, if $\Gamma$ is torsion-free, then this direct summand is also isomorphic to the groups $\rIP_{\loc,n}(\bE_{\Gamma}/\Gamma) \cong \rIP_n(\bE_{\Gamma}/\Gamma)$.     
\end{introthm}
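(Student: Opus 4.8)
The plan is to obtain \cref{thm:main3} as a formal consequence of \cref{thm:BCI}, \cref{thm:covering}, and \cref{thm:main2}, and to locate the real difficulty in the proof of \cref{thm:BCI}. Write
\[
    \mu_\Gamma \colon \rIP_{\loc,n}^\Gamma(\bE_{\Gamma}) \longrightarrow \rIP_n^\Gamma(\bE_{\Gamma})
\]
for the homomorphism on homotopy groups induced by the morphism of $\Gamma/N$-equivariant $\Omega$-spectra that sends an IG localization flow $\{\sfH_s\}_{s\in[1,\infty)}$ to the IG UAL Hamiltonian $\sfH_1$ (forgetting the flow for $s>1$); this is the quantum-spin-system incarnation of the Davis--L\"uck assembly map for $\Gamma$. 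First, \cref{thm:BCI} asserts that $\mu_\Gamma$ is split injective, so its image realizes $\rIP_{\loc,n}^\Gamma(\bE_{\Gamma})$ as a direct summand of $\rIP_n^\Gamma(\bE_{\Gamma})$. Next, \cref{thm:covering} with $X=\bE_{\Gamma}$ identifies this summand with $\rIP_{\loc,n}^{\Gamma/N}(\bE_{\Gamma}/N)$, the $\Gamma/N$-equivariant IP-homology of the flat torus $\bE_{\Gamma}/N=T^m$ under the residual action of the finite point group $\Gamma/N\subset O(m)$. This is the first assertion.

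For the torsion-free refinement, if $\Gamma$ is a Bieberbach group it acts freely on $\bE_{\Gamma}=\bR^m$, and hence $\Gamma/N$ acts freely on $\bE_{\Gamma}/N=T^m$ (a nontrivial element of $\Gamma/N$ with a fixed point on $T^m$ would lift to a nontrivial element of $\Gamma$ with a fixed point on $\bR^m$). Because $\rIP_{\loc,*}$ is, for the finite group $\Gamma/N$, an equivariant generalized homology theory --- the equivariant construction of \cref{section:Gsheaf} applies, $\Gamma/N$ being compact --- it sends a free $\Gamma/N$-CW-complex to the homology of its orbit space, whence $\rIP_{\loc,n}^{\Gamma/N}(\bE_{\Gamma}/N)\cong\rIP_{\loc,n}\bigl((\bE_{\Gamma}/N)/(\Gamma/N)\bigr)=\rIP_{\loc,n}(\bE_{\Gamma}/\Gamma)$; alternatively one re-runs \cref{thm:covering} for this free finite action. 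Finally $\bE_{\Gamma}/\Gamma$ is a closed flat manifold, hence a finite CW-complex that embeds in a Euclidean space (\cref{defn:MCW}), so \cref{thm:main2} gives $\rIP_{\loc,n}(\bE_{\Gamma}/\Gamma)\cong\rIP_n(\bE_{\Gamma}/\Gamma)$, completing the reduction.

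It remains to prove \cref{thm:BCI}, the split injectivity of $\mu_\Gamma$, and the plan is to transcribe the coarse-geometric descent argument \cite{roeIndexTheoryCoarse1996}*{Theorem 8.4} into IP cohomology. The two geometric inputs are in place: $\bE_{\Gamma}=\bR^m$ with its isometric $\Gamma$-action is a \emph{cocompact} model of the classifying space for proper $\Gamma$-actions (a crystallographic group has finite point stabilizers and $\bR^m$ is contractible), and the \emph{non-equivariant} IP assembly map for the Euclidean space $\bR^m$ is a weak equivalence --- this is exactly \cref{thm:scaleable}, the IP incarnation of the coarse Baum--Connes isomorphism for $\bR^m$, whose homotopy inverse is realized by the smooth homotopy between an IG UAL Hamiltonian and its rescaled versions. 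The descent then combines these: beyond the evident comparison square between $\mu_\Gamma$ and the non-equivariant assembly map, it uses the proper cocompact $\Gamma$-action on $\bR^m$ together with a $\Gamma$-equivariant dual-Dirac-type construction on $\bR^m$ to manufacture a left inverse $\rIP_n^\Gamma(\bE_{\Gamma})\to\rIP_{\loc,n}^\Gamma(\bE_{\Gamma})$ of $\mu_\Gamma$.

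The hard part will be carrying this out rigorously in the present framework, and I expect it to split into two difficulties. First, every comparison, forgetful, and rescaling map must be produced as a morphism of $\Gamma/N$-equivariant sheaves --- or of their geometric realizations --- \emph{compatibly with the Kitaev-pump structure maps $\kappa_d$}, so that one obtains a map of $\Omega$-spectra rather than merely of spaces; this is the same bookkeeping that pervaded the proof of \cref{thm:main1}. Second, and more essentially, the operator-analytic control must survive the equivariant dual-Dirac/rescaling construction: one must verify, uniformly in the scale parameter $s$, that the UAL/Fr\'echet-seminorm estimates of \cref{subsection:quantum.spin.system} and the hypotheses of Hastings' adiabatic theorem are preserved, so that the output is genuinely an IG localization flow in the sense of \cref{defn:localizing.path} (with interaction range decaying as $s\to\infty$, and still invertible and gapped). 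This analytic control of the scales, rather than the homotopy-theoretic formalism, is where I expect the crux of \cref{thm:BCI} --- and hence of \cref{thm:main3} --- to lie.
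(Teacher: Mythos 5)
Your reduction of Theorem~\ref{thm:main3} to Theorems~\ref{thm:BCI},~\ref{thm:covering}, and~\ref{thm:Atiyah.duality} is correct and matches the paper exactly, including the free-action observation for torsion-free $\Gamma$ and the final invocation of the Atiyah duality (Theorem~\ref{thm:main2}).

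The genuine gap is in the last two paragraphs, in your prognosis for the proof of Theorem~\ref{thm:BCI}. You propose to ``manufacture a left inverse'' via a $\Gamma$-equivariant dual-Dirac-type construction on $\bR^m$, and then locate the crux in uniform-in-$s$ analytic estimates for that construction. That is a genuinely different --- and considerably harder --- route than the paper's, and it is also not what \cite{roeIndexTheoryCoarse1996}*{Theorem 8.4} does; the whole point of the Roe-style descent is that it \emph{avoids} any dual-Dirac element once the coarse isomorphism is in hand. The paper's actual proof of Theorem~\ref{thm:BCI} is a two-line commutative-diagram argument in the bivariant theory $\rIP^\Gamma_{\loc,n}(\bE_\Gamma \mid \blank)$: one checks that the left vertical map $\pr_\sM^*\colon\rIP^\Gamma_{\loc,n}(\bE_\Gamma\mid\sM)\to\rIP^\Gamma_{\loc,n}(\bE_\Gamma\mid\bE_\Gamma\times\sM)$ is an isomorphism (Lemma~\ref{lem:proper.to.FC}, because $\bE_\Gamma^K$ is contractible for every finite $K\le\Gamma$), and that the bottom horizontal assembly map over the proper $\Gamma$-CW-complex $\bE_\Gamma\times\sM$ is an isomorphism (Lemma~\ref{lem:BC.proper}, proved cell-by-cell using the induction isomorphism of Lemma~\ref{lem:inducction.sheaves}, a Mayer--Vietoris argument, and the equivariant scaleability, i.e.\ Theorem~\ref{thm:equivariant.stable.homotopy.IP}(4), for finite isotropy groups). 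The composite $(\pr_\sM^*)^{-1}\circ\mu_\Gamma^{-1}\circ\pr_\sM^*$ is then the desired retraction. Thus no dual-Dirac element is ever constructed, and the analytic input --- Theorem~\ref{thm:scaleable} and its equivariant form --- has already been established before Theorem~\ref{thm:BCI} is stated; the latter is purely formal. Your anticipated ``hard part'' therefore does not occur where you expect, and if you were to pursue the dual-Dirac route as written you would be solving a strictly harder problem than necessary.
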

Since the quotient space $\bE_{\Gamma}/N$, an $m$-dimensional torus, is the classifying space $BN$, \cref{thm:main3} partially proves Xiong's generalized cohomology hypothesis.
Moreover, in the case that $\Gamma$ torsion-free, we have $\bE_{\Gamma}/\Gamma=B\Gamma$. 
That is, this means that $\rIP_{\loc,0}^\Gamma(\bE_{\Gamma})$ is isomorphic to the Borel equivariant IP-homology group of $\bE_{\Gamma}$, which partially proves the statement referred to as the crystalline equivalence principle in \cite{thorngrenGaugingSpatialSymmetries2018}.

More practically, \cref{thm:main3} not only reproves the result of Jappens \cite{jappensSPTIndicesEmerging2024} as a special case where $d\leq 2$ and $G= \bZ^2$, but also provides some new topological invariants for $2$-dimensional IG UAL Hamiltonians with crystallographic symmetries, e.g., \cref{cor:pg}. More generally, \cref{thm:main3} provides a mathematical basis of the computations of the crystalline-symmetric invertible phases via the Atiyah--Hirzebruch spectral sequence  \cites{shiozakiGeneralizedHomologyAtiyah2023,leeConnectionFreefermionInteracting2024,leeCrystallineequivalentTopologicalPhases2024} (\cref{cor:spectral.sequence}).

The idea of the coarse assembly map, or the `forget-control' map, goes back to Quinn \cite{quinnEndsMapsII1982}. In modern times, it shares a context with the Farrell--Jones conjecture in algebraic K-\ and L-theories \cite{farrellIsomorphismConjecture1993}.  
The concept of the assembly maps for general equivariant homology theories has been defined by Davis--L\"{u}ck in \cite{davisSpacesCategoryAssembly1998} so that it covers both the Baum--Connes and the Farrell--Jones assembly maps. 
Our equivariant coarse assembly map for IP cohomology theory is also identified with the Davis--L\"{u}ck assembly map under a canonical isomorphism of the domain groups. 
This comparison leads us to an optimistic conjecture that our assembly map for the equivariant IP-cohomology theory is also an isomorphism, as well as the successes of other assembly maps.

\subsection{Construction of the paper}
The paper is organized as follows.
\cref{section:analysis.spin} summarizes the analysis of the quantum spin system used in this paper.
\cref{section:sheaf.IP} introduces the mathematical formulation of the space of invertible gapped uniformly almost local Hamiltonians in terms of sheaves on $\Man$.
Under this terminology, \cref{section:Kitaev} proves of Kitaev's conjecture.
This main theorem and the results of the subsequent papers \cites{kubotaStableHomotopyTheory2025a,kubotaStableHomotopyTheory2025b} present the consequences for the computation of homotopy groups in \cref{section:homotopy}.
In \cref{section:localizing.path}, we give a coarse geometric model with localization flows to the generalized homology theory associated with the $\Omega$-spectrum constructed in \cref{section:Kitaev}.
In \cref{section:assembly}, we formulate a coarse assembly map for the space of IG UAL Hamiltonians with spatial symmetry and prove its split injectivity. 

\addtocontents{toc}{\protect\setcounter{tocdepth}{1}}%
\setcounter{tocdepth}{1}%

\subsection*{Notation}
    Throughout the paper, we use the following general notation. 
    \begin{itemize}
        \item For a (finite) set $X$, its cardinality denotes $\# X$ rather than $|X|$. 
        \item For a topological space $X$, we write $X_+$ for the disjoint union $X \sqcup \{ \pt \}$ regarded as a pointed topological space with the basepoint $\pt$. 
        \item We use Euler Script font, e.g., $\sM$, $\sF$, $\sX$, ..., for sheaves on $\Man$, or topological spaces that play the role of cohomological variables, i.e., parameter spaces of families of quantum systems. 
        We use italic font $X,Y,\dots$ for metric spaces on which a quantum system is placed, whose point is labeled by a bold small character, e.g., $\bm{x}, \bm{y} \in X$.
        \item For a metric space $X$, $\bm{x}\in X$ and $r >0$, the open ball denotes $B_r(\bm{x}) \coloneqq \{ \bm{y} \in X \mid \rmd(\bm{x},\bm{y}) <r\}$. For subspaces $Y,Z \subset X$, their distance $\mathop{\mathrm{dist}}(Y,Z)$ is the infimum of the distances of points $\bm{y} \in Y$ and $\bm{z} \in Z$. We write $N_r(Y)\coloneqq \{ \bm{y} \in X \mid \mathop{\mathrm{dist}}(\bm{y},Y) <r\} $ denotes its $r$-neighborhood.
        \item For any direct product $X_1 \times \cdots \times X_k $ of sets, topological spaces, or manifolds, we write $\pr_{X_i}$ for the projection onto the $X_i$ component, i.e., $\pr_{X_i}(x_1,\cdots,x_k)=x_i$ for $(x_1,\cdots,x_k) \in X_1 \times \cdots \times X_k$.  
        \item For a direct product $X \times X$ or a tensor product $\cA \otimes \cA$ of two copies of an object, we write $\flip$ for the flip automorphism acting on them. 
        \item We use the standard multi-index notation for differential: For $I = (i_1,\cdots,i_n) \in \bZ_{\geq 0}^n$, we write $|I| \coloneqq \sum_j i_j$ and $\partial^I = \partial_{x_1}^{i_1} \circ \cdots \circ \partial_{x_n}^{i_n}$. 
        \item When a subscript is added to a set, e.g., $[0,1]_s$ or $\bR_t$, it shall mean that an element of the set $[0,1]$ or $\bR$ is indicated by the parameter $s$ or $t$. 
        \item If $X$ is a Hilbert space, Banach space, or a Fr\'{e}chet space, the algebra of bounded operators on it denotes $\cB(X)$. 
        \item We write $1_A$, or just $1$, for the unit element of a C*-algebra $A$. 
    \end{itemize}

\subsection*{Acknowledgment}
The author is grateful to Ken Shiozaki for informing him of the problem and numerous variable discussions. 
He would also like to express his gratitude to Yoshiko Ogata for her comments and advice on the draft of this paper.
He also thanks Yosuke Morita and Shuhei Ohyama for helpful discussions.
The author is supported by RIKEN iTHEMS and JSPS KAKENHI Grant Numbers 22K13910, JPMJCR19T2. 
\addtocontents{toc}{\protect\setcounter{tocdepth}{2}}%
\setcounter{tocdepth}{2}%

\section{Analysis of quantum spin systems}\label{section:analysis.spin}
We begin by outlining the functional-analytic setup of quantum spin systems and the analytical tools used in this paper. 
We follow the literature on operator-algebraic studies of quantum spin systems \cites{bratteliOperatorAlgebrasQuantum1997,moonAutomorphicEquivalenceGapped2020,nachtergaeleQuasilocalityBoundsQuantum2019}, and also the framework recently developed by Kapustin, Sopenko, and Artymowicz \cites{kapustinLocalNoetherTheorem2022,artymowiczQuantizationHigherBerry2023}. 
Thereafter, in \cref{subsection:adiabatic.interpolation}, we give a construction of a new gapped Hamiltonian $\vartheta \sfH$ that spatially interpolates a given Hamiltonian $\sfH$ and the trivial Hamiltonian $\sfh$, obtained from a smooth homotopy $\sfH \simeq\sfh$. 
This construction serves as the analytic ingredient of our proof of Kitaev's conjecture in \cref{section:Kitaev}.

\subsection{Coarse geometry of a lattice of the Euclidean space}\label{subsection:coarse.lattice}
We begin by formulating the class of lattices on which quantum spin systems are placed. To realize Kitaev's conjecture, we must deal with a slightly broader class of discrete metric spaces than $\bZ^d$ or Delone sets in $\bR^d$.

\begin{defn}
Let $r>0$ and $N \in \bN$. We say that a map $\iota \colon \Lambda \to  \bR^{l_{\Lambda}}$ is \emph{$(r,N)$-weakly uniformly discrete} if $ \# \iota^{-1}(B_r(\bm{x})) \leq N$ for any $\bm{x} \in \bR^{l_{\Lambda}}$, and is weakly uniformly discrete if it is $(r,N)$-weakly uniformly discrete for some $(r,N)$. 
\end{defn}
We often abuse $\iota$ and say that $\Lambda $ is weakly uniformly discrete in $\bR^{l_{\Lambda}}$, and consider it as a `subset with multiplicity' of $\bR^{l_{\Lambda}}$. 
If $\Lambda $ is $(r,1)$-weakly uniformly discrete, then $\iota$ is injective and the image $\iota(\Lambda)$ is what is called an $r$-uniformly discrete subset of $\bR^{l_{\Lambda}}$. If it is also $R$-relatively dense (i.e., $ N_R(\iota(\Lambda)) =\bR^{l_{\Lambda}}$) for some $R>0$, then $\iota(\Lambda)$ is what is called a Delone set. 
We will deal with weakly uniformly discrete maps to $\bR^{l_{\Lambda}}$ but may not be relatively dense, where the superscript $l_{\Lambda}$ will not mean the dimension of the physical space (as we will discuss in \cref{section:sheaf.IP}). 

A weakly uniformly discrete map $\iota \colon \Lambda \to \bR^{l_{\Lambda}}$ imposes to $\Lambda$ the pseudo-metric $\rmd(\bm{x},\bm{y} ) \coloneqq \rmd_{\ell^2}(\iota(\bm{x}),\iota(\bm{y}))$, which is proper in the sense that any open ball has finite number of points. 
The analysis of quantum spin systems is controlled by this $\rmd$, or more essentially, its quasi-isometry class. 
In other words, the class of pseudo-metric spaces we deal with is quasi-isometrically embedded into $\bR^{l_{\Lambda}}$. 
We remind some terminology in metric geometry: A map $F$ between proper countable pseudo-metric spaces $(\Lambda,\rmd)$, $(\Lambda',\rmd')$ is a large-scale Lipschitz map if there are constants $A >0$ and $B>0$ such that $\rmd'(F(\bm{x}),F(\bm{y})) \leq A \cdot \rmd(\bm{x},\bm{y}) +B$ for any $\bm{x}, \bm{y}\in \Lambda$, and is a large-scale bi-Lipschitz map (or a quasi-isometric embedding) if it further satisfies $A^{-1}\rmd(\bm{x},\bm{y}) - B \leq \rmd'(F(\bm{x}), F(\bm{y}))$. 
A large-scale bi-Lipschitz map is a quasi-isometry if it has a relatively dense image.

The pseudo-metric space coming from a weakly uniformly discrete map $\iota \colon \Lambda \to \bR^{l_{\Lambda}}$ has the following two growth properties. 
\subsubsection*{(1) Polynomial growth}
First, such $\Lambda$ has \emph{polynomial growth}, i.e., there are positive constants $\kappa_{\Lambda}>0$ such that the growth rate function $Q_{\Lambda}(r) = \sup_{\bm{x} \in \Lambda} \#\iota^{-1}(B_{r}(\bm{x}))$ of $\Lambda$ is bounded above by $\kappa_{\Lambda} \cdot (1+r)^{l_{\Lambda}}$.

We remind the following elementary estimate concerned with polynomial growth property, which is also noted in \cite{nachtergaeleQuasilocalityBoundsQuantum2019}*{Subsections 8.1,8.2}, since it is frequently used throughout the paper. 
\begin{rmk}\label{lem:sum.exponential}
    Let $S$ be a countable set, let $l \geq 0$ and $r_0 \geq 1 $. 
    Let $\varphi \colon S \to \bR_{\geq 1}$ satisfies that  $\# \varphi^{-1}([1,r)) \leq \kappa \cdot r^l$ for any $r \geq r_0$. 
    Then, for any $\nu > l+1$, we have
    \begin{align*}
         \sum_{\bm{x} \in S , \  r_0 \leq \varphi (\bm{x})} \varphi(\bm{x})^{-\nu} \leq 
        2^{l+1}\kappa  \cdot r_0^{l+1-\nu}.
    \end{align*}
Indeed, since $\xi(t) = t^{l-\nu}$ is monotonously decreasing on $t \in [1,\infty) $, we have
\begin{align*}
     \sum_{r_0\leq f(\bm{x})} \varphi(\bm{x})^{-\nu} \leq {}&{} \sum_{n = 0}^\infty (r_0+n)^{-\nu}  \# \varphi^{-1}( [r_0+n ,  r_0+n +1 )) \\
     \leq {}&{}  \sum_{n =0}^\infty \kappa\frac{(r_0+n+1)^l}{(r_0+n)^l} \cdot (r_0+n)^{l-\nu}    
    \leq 2^l \kappa \cdot \bigg(r_0^{l-\nu} + \int_{r_0}^{\infty} t^{l-\nu} dt\bigg)\\
    \leq {}&{}  2^{l}\kappa (r_0^{-1} + (\nu-l)^{-1}) \cdot r_0^{l+1-\nu } \leq 2^{l+1}\kappa \cdot r_0^{l+1-\nu }. 
\end{align*}
\end{rmk}

\begin{lem}\label{prp:Ffunction}
    Let $\Lambda$ be a proper discrete metric space with the growth rate $Q_{\Lambda}(r) \leq \kappa_{\Lambda} \cdot (1+r)^{l_\Lambda}$. Then, for any $0\leq \mu<1$ and $\nu > l_{\Lambda}+1 $, the function 
    \begin{align*}
    f_{\nu , \mu}(r) \coloneqq (1+r)^{-\nu}\exp (-r^\mu )
    \end{align*}
    is an $F$-function on $\Lambda$ in the sense of \cite{nachtergaeleQuasilocalityBoundsQuantum2019}*{Section 3.1.1}. More precisely, there is $C_{\nu} > 0$ such that 
    \begin{align*} 
    \sum_{\bm{z} \in \Lambda} f_{\nu,\mu}(\rmd(\bm{x},\bm{z})) \leq 2^{l_{\Lambda} +1}\kappa_{\Lambda}, \quad 
    \sum_{\bm{z}\in \Lambda} f_{\nu,\mu}(\rmd(\bm{x},\bm{z})) \cdot f_{\nu,\mu}(\rmd(\bm{z},\bm{y})) \leq C_{\nu} \cdot f_{\nu,\mu }(\rmd(\bm{x},\bm{y})),
    \end{align*}
    for any $\bm{x}, \bm{y} \in \Lambda$.
\end{lem}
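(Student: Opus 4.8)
The plan is to derive both bounds from Remark~\ref{lem:sum.exponential}, using only the trivial estimate $\exp(-r^\mu)\le 1$ for the first inequality and the subadditivity of $t\mapsto t^\mu$ for the second.

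\emph{Step 1 (single sum).} First I would observe that $f_{\nu,\mu}(\rmd(\bm{x},\bm{z}))\le (1+\rmd(\bm{x},\bm{z}))^{-\nu}$, so it suffices to bound $\sum_{\bm{z}\in\Lambda}(1+\rmd(\bm{x},\bm{z}))^{-\nu}$. I apply Remark~\ref{lem:sum.exponential} with $S=\Lambda$, $\varphi(\bm{z})=1+\rmd(\bm{x},\bm{z})$, $r_0=1$, $l=l_\Lambda$, $\kappa=\kappa_\Lambda$, and the given $\nu>l_\Lambda+1$. Its hypothesis holds because, for $r\ge 1$, $\varphi^{-1}([1,r))=\{\bm{z}\in\Lambda : \rmd(\bm{x},\bm{z})<r-1\}$ has cardinality at most $Q_\Lambda(r-1)\le\kappa_\Lambda(1+(r-1))^{l_\Lambda}=\kappa_\Lambda r^{l_\Lambda}$ (and is empty when $r=1$). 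Since $\varphi\ge 1$ everywhere, the conclusion of the remark yields $\sum_{\bm{z}\in\Lambda}(1+\rmd(\bm{x},\bm{z}))^{-\nu}\le 2^{l_\Lambda+1}\kappa_\Lambda$, which is the first claimed inequality.

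\emph{Step 2 (convolution bound).} Put $R:=\rmd(\bm{x},\bm{y})$. By the triangle inequality every $\bm{z}\in\Lambda$ satisfies $\rmd(\bm{x},\bm{z})\ge R/2$ or $\rmd(\bm{z},\bm{y})\ge R/2$, so I split the sum over these two (overlapping) sets. On the set $\{\rmd(\bm{x},\bm{z})\ge R/2\}$ I use that (i) $1+\rmd(\bm{x},\bm{z})\ge\tfrac12(1+R)$, hence $(1+\rmd(\bm{x},\bm{z}))^{-\nu}\le 2^\nu(1+R)^{-\nu}$, and (ii) $R^\mu\le(\rmd(\bm{x},\bm{z})+\rmd(\bm{z},\bm{y}))^\mu\le\rmd(\bm{x},\bm{z})^\mu+\rmd(\bm{z},\bm{y})^\mu$ for $0\le\mu\le 1$, hence $\exp(-\rmd(\bm{x},\bm{z})^\mu)\exp(-\rmd(\bm{z},\bm{y})^\mu)\le\exp(-R^\mu)$. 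Combining (i) and (ii) gives $f_{\nu,\mu}(\rmd(\bm{x},\bm{z}))\,f_{\nu,\mu}(\rmd(\bm{z},\bm{y}))\le 2^\nu f_{\nu,\mu}(R)\,(1+\rmd(\bm{z},\bm{y}))^{-\nu}$; summing over $\bm{z}$ and invoking Step~1 centered at $\bm{y}$ bounds this piece by $2^{\nu+l_\Lambda+1}\kappa_\Lambda\,f_{\nu,\mu}(R)$. The set $\{\rmd(\bm{z},\bm{y})\ge R/2\}$ is symmetric, so the full sum is at most $C_\nu f_{\nu,\mu}(R)$ with, say, $C_\nu=2^{\nu+l_\Lambda+2}\kappa_\Lambda$.

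\emph{Main obstacle.} The only genuinely delicate point is the exponential weight in Step~2: one cannot simply replace $f_{\nu,\mu}(\rmd(\bm{x},\bm{z}))$ by $f_{\nu,\mu}(R/2)$, since $\exp(R^\mu-(R/2)^\mu)$ is unbounded in $R$; the fix is to redistribute the stretched-exponential weight across \emph{both} factors using subadditivity of $t\mapsto t^\mu$ \emph{before} performing the near-$\bm{x}$/near-$\bm{y}$ dichotomy. The degenerate cases $R=0$ and $\mu=0$ need no separate treatment, as the estimate in Step~2 already covers them.
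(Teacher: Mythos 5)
Your Step 1 is identical to the paper's. Your Step 2, however, takes a cleaner route. The paper splits $\Lambda$ into the $r$-neighborhood $N_r(\{\bm{x},\bm{y}\})$ (with $r=R/2$) and its complement, and handles the two pieces with two different estimates: on the balls it bounds the far polynomial factor by $(1+r)^{-\nu}$ and sums the near one, while on the complement it uses the two-sided comparability $2^{-1}\rmd(\bm{z},\bm{y})\le\rmd(\bm{x},\bm{z})\le 2\rmd(\bm{z},\bm{y})$ to show $(1+\rmd(\bm{x},\bm{z}))(1+\rmd(\bm{z},\bm{y}))\gtrsim(1+\rmd(\bm{x},\bm{z})+\rmd(\bm{z},\bm{y}))^2$ and then applies \cref{lem:sum.exponential} a second time with the test function $\varphi(\bm{z})=1+\rmd(\bm{x},\bm{z})+\rmd(\bm{z},\bm{y})$. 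You instead observe that the \emph{overlapping} cover $\{\rmd(\bm{x},\bm{z})\ge R/2\}\cup\{\rmd(\bm{z},\bm{y})\ge R/2\}=\Lambda$ lets you apply a single estimate twice symmetrically: on each piece one polynomial factor is simply $\le 2^\nu(1+R)^{-\nu}$, and the other is summed directly via Step~1. Both arguments redistribute the stretched exponential via subadditivity of $t\mapsto t^\mu$ (the paper derives this from concavity, $g(t)+g(1-t)\ge g(0)+g(1)$) \emph{before} doing any spatial splitting, so that point is genuinely shared. What your version buys is that only a single application of \cref{lem:sum.exponential} per piece is needed, and no two-sided distance comparison on the annulus; it also produces the slightly tighter constant $2^{\nu+l_\Lambda+2}\kappa_\Lambda$ versus the paper's $\kappa_\Lambda 2^{l_\Lambda+1}(2^{\nu+1}+6^\nu)$. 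Both are correct; yours is the more economical decomposition.
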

\begin{proof}
    Note that the polynomial growth assumption is rewritten to $\# \varphi^{-1}([1,r+1)) \leq \kappa_{\Lambda}\cdot (1+r)^{l_{\Lambda}}$ via $\varphi (\bm{y}) = 1+\rmd(\bm{x},\bm{y})$. 
    This, for any $\bm{x} \in \Lambda$, by \cref{lem:sum.exponential} we have 
    \begin{align*} 
    \sum_{\bm{y} \in \Lambda} f_{\nu,\mu}(\rmd(\bm{x},\bm{y}))  \leq \sum_{\bm{y} \in \Lambda} (1+\rmd(\bm{x},\bm{y}))^{-\nu} \leq 2^{l_\Lambda +1}\kappa_{\Lambda} <\infty.
    \end{align*}
    For $\bm{x}, \bm{y} \in \Lambda$, let $r \coloneqq \rmd(\bm{x},\bm{y})/2$. 
    For any $\bm{z} \in \Lambda \setminus N_{r}(\{ \bm{x} , \bm{y}\})$, we have $2^{-1}\rmd(\bm{z},\bm{y})  \leq \rmd(\bm{x}, \bm{z} ) \leq 2\rmd(\bm{z},\bm{y})$, and hence $6(1+\rmd(\bm{x},\bm{z}))(1+\rmd(\bm{z},\bm{y})) \geq  (1+\rmd(\bm{x},\bm{z}) + \rmd(\bm{z},\bm{y}))^2 $. 
    Moreover, by the concavity of $g(x) = x^\mu$ for $0<\mu<1$, we have $g(t) + g(1-t) \geq g(0)+g(1)=1$, and hence $\rmd(\bm{x},\bm{z})^\mu + \rmd(\bm{z},\bm{y})^\mu \geq (\rmd(\bm{x},\bm{z}) + \rmd(\bm{z},\bm{y}))^\mu \geq \rmd(\bm{x},\bm{y})^\mu$ for any $\bm{z} \in \Lambda$. 
    Thus 
    \begin{align*}
    {}&{}\sum_{\bm{z}\in \Lambda} f_{\nu,\mu}(\rmd(\bm{x},\bm{z})) \cdot f_{\nu,\mu}(\rmd(\bm{z},\bm{y}))\\
    = {}&{} 
    \bigg( \sum_{\bm{z} \in N_{r}(\{ \bm{x}, \bm{y}\} )} + \sum _{\bm{z} \in \Lambda \setminus N_{r}(\{ \bm{x},\bm{y}\})} \bigg) (1+\rmd(\bm{x},\bm{z}))^{-\nu} \cdot (1+\rmd(\bm{z},\bm{y}))^{-\nu} \cdot e^{- (\rmd(\bm{x},\bm{z})^\mu + \rmd(\bm{z},\bm{y})^\mu)}\\
    \leq {}&{} 2 \cdot \kappa_{\Lambda}2^{l_{\Lambda} +1} \cdot (1+r)^{-\nu}e^{-\rmd(\bm{x},\bm{y})^\mu} + 
    \kappa_{\Lambda}2^{l_{\Lambda} +1} \cdot 6^\nu  \cdot (1+2r)^{l_{\Lambda}+1-2\nu}e^{-\rmd(\bm{x},\bm{y})^\mu} \\
    \leq {}&{} \kappa_{\Lambda}2^{l_{\Lambda} +1} (2^{\nu +1} + 6^\nu) \cdot  f_{\nu,\mu}(\rmd(\bm{x},\bm{y})).
    \end{align*}
    For the first sum over $B_r(\bm{x})\subset N_r(\{\bm{x},\bm{y}\})$, we apply the inequalities $(1+\rmd(\bm{z},\bm{y}))^{-\nu} \leq (1+r)^{-\nu}$ and $ e^{- (\rmd(\bm{x},\bm{z})^\mu + \rmd(\bm{z},\bm{y})^\mu)} \leq e^{-\rmd(\bm{x},\bm{y})^\mu}$, and then use \cref{lem:sum.exponential} to handle the sum of $(1+\rmd(\bm{x},\bm{z}))^{-\nu}$.  For the second sum, apply \cref{lem:sum.exponential} for the function $\varphi(\bm{z})=1+\rmd(\bm{x},\bm{z}) + \rmd(\bm{z},\bm{y})$. 
\end{proof}

\subsubsection*{(2) Bricks}\label{subsubsection:brick}
Another important property of uniformly discrete $\Lambda$ in $\bR^{l_\Lambda}$ is that it is equipped with a poset of finite subsets called \emph{bricks} in \cite{kapustinLocalNoetherTheorem2022}*{Appendix C}. 
\begin{defn}\label{defn:brick}
Let $\bB=\bB_l$ denote the poset of subsets of $\bR^l$ the form $\rho \coloneqq \prod_{i=1}^l [n_i,m_i)$, where $n_i,m_i \in \bZ$. 
The \emph{brick} of $\Lambda$ is the ordered map $B \colon \bB_{l_{\Lambda}} \to 2^\Lambda$ given by $B_\rho \coloneqq \iota^{-1}(\rho)$. 
\end{defn}
Since the poset $\bB$ is locally finite, it has the M\"obius function $\mu_{\bB} \colon \bB \times \bB \to \bZ$, which is the unique map characterized by M\"obius inversion formula $f(\rho) = \sum_{\sigma \in \bB}\mu_{\bB} (\sigma, \rho)g(\sigma)$, where $g(\sigma) \coloneqq \sum_{\tau \leq \sigma}f(\tau ) $, for any map $f$ from $\bB$ to an arbitrary abelian group (cf.\ \cite{jacobsonBasicAlgebra1985}*{Section 8.6}). As is used in \cite{kapustinLocalNoetherTheorem2022}*{Proposition C.1}, the M\"{o}bius function of $\bB_{l_{\Lambda}}$ satisfies 
\begin{align}
    \sup_\rho \sum_{\sigma \leq \rho}|\mu_{\bB}(\sigma , \rho)| \leq 4^{l_{\Lambda}}.\label{eqn:brick.Mobius}
\end{align}

Moreover, the poset $\bB$ has some polynomial growth properties. 
For $\rho \in \bB$ and $\bm{x} \in \Lambda$, let
\begin{align}
    R(\rho,\bm{x}) \coloneqq \sup \{ r \geq 0 \mid \text{ there is $\sigma \lneq \rho \in \bB$ such that $B_r(\bm{x}) \cap B_{\rho} \subset B_\sigma$} \} \in [0,\infty]. \label{defn:R(rho.x)}
\end{align}
Note that $R(\rho,\bm{x}) = \infty$ if and only if there is $\sigma \lneq  \rho$ such that $B_\sigma = B_\rho$. This observation implies that, if $\rho=\prod_i [n_i,m_i)$ satisfies $R(\rho,\bm{x}) <\infty$, then its thickened faces $\partial_{i,+} \rho \coloneqq \{ \bm{x} \in \rho \mid x_i \leq n_i+1 \}$ and $\partial_{i,-}\rho  \coloneqq \{ \bm{x} \in \rho \mid x_i \geq m_i-1 \}$ correspond to non-empty subsets $B_{\partial_{i,\pm }\rho} \subset \Lambda$. By this, we have 
\begin{align}
    R(\rho,\bm{x}) = \sup_{\bm{y} \in \rho} \rmd_{\ell^1}(\iota(\bm{x}),\bm{y}) -1 \geq \frac{1}{\sqrt{l_\Lambda}} \sup_{\bm{y} \in \rho} \rmd_{\ell^2}(\iota(\bm{x}),\bm{y}) -1. 
    \label{eqn:R(rho.x)}
\end{align}
In particular, the infimum $\inf _{\bm{y} \in \Lambda } R(\rho,\bm{y})$ is bounded below by $\diam (\rho)/(2 \sqrt{l_{\Lambda}}) -1$. 
This bound immediately implies the following growth bounds. 

\begin{lem}\label{lem:R(rho.x).growth}
    There is $\kappa_{\bB} >0$ such that, for any $\bm{x} \in \Lambda$ and $\rho \in \bB$, we have
\begin{enumerate}
    \item $\# \{ \rho \in \bB \mid R(\rho,\bm{x}) \leq r \} \leq  \kappa_{\bB}  (1+r)^{2l_{\Lambda}}$, 
    \item $\# \{ \bm{x} \in \Lambda \mid R(\rho ,\bm{x}) \leq r \} \leq \kappa_{\bB}  (1+r)^{l_{\Lambda}}$,
    \item $\# \{ \rho \in \bB \mid \text{$B_\rho \ni \bm{x}$, $\inf_{\bm{y} \in \Lambda} R(\rho,\bm{y}) \leq r$} \}  \leq  \kappa_{\bB} (1+r)^{2l_{\Lambda }}$.
\end{enumerate}
\end{lem}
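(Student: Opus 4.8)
The plan is to reduce, in each case, the hypothesis on $R(\rho,\bm x)$ to an elementary geometric constraint — a box confined to a Euclidean ball of radius $O(r)$, or a lattice point confined to such a ball — and then to count integer boxes, respectively lattice points, by hand.

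The key preliminary observation is that $R(\rho,\bm x)\le r$ for a finite $r$ forces $R(\rho,\bm x)<\infty$, so by the remark following \eqref{defn:R(rho.x)} the brick $\rho=\prod_i[n_i,m_i)$ is \emph{genuine}: no proper sub-brick $\sigma<\rho$ has $B_\sigma=B_\rho$, equivalently all thickened faces $B_{\partial_{i,\pm}\rho}$ are non-empty (in particular $B_\rho\neq\varnothing$), and the identity \eqref{eqn:R(rho.x)} applies. Writing $\bm p=\iota(\bm x)$, that identity gives $R(\rho,\bm x)=\sup_{\bm y\in\rho}\rmd_{\ell^1}(\bm p,\bm y)-1$; since $\sup_{\bm y\in\rho}\rmd_{\ell^1}(\bm p,\bm y)$ dominates each of $|p_i-n_i|$ and $|p_i-m_i|$, the bound $R(\rho,\bm x)\le r$ forces $|p_i-n_i|\le r+1$ and $|p_i-m_i|\le r+1$ for every $i$. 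I will also use that genuineness is a property of $\rho$ alone, so $R(\rho,\cdot)$ is either everywhere finite or everywhere $+\infty$ on $\Lambda$.

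For part (1) I would fix $\bm x$: by the above each of the $2l_\Lambda$ integers $n_i,m_i$ lies in an interval of length $2(r+1)$, hence takes at most $2r+3$ values, giving at most $(2r+3)^{2l_\Lambda}$ such bricks. For part (3) I would fix $\bm x$ and note that any brick counted there is genuine (indeed $\inf_{\bm y\in\Lambda}R(\rho,\bm y)\le r$ forces $R(\rho,\bm y_1)<\infty$ for some $\bm y_1$), so \eqref{eqn:R(rho.x)} holds at every point of $\Lambda$ and the recorded lower bound $\inf_{\bm y\in\Lambda}R(\rho,\bm y)\ge\diam(\rho)/(2\sqrt{l_\Lambda})-1$ yields $\diam(\rho)\le 2\sqrt{l_\Lambda}(r+1)$, hence $m_i-n_i\le 2\sqrt{l_\Lambda}(r+1)$; combined with $\bm p\in\rho$ (so $n_i\le p_i<m_i$) this again leaves $O(1+r)$ choices for each of the $2l_\Lambda$ integers, hence $O((1+r)^{2l_\Lambda})$ bricks. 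For part (2) I would fix $\rho$: if no $\bm x$ satisfies the bound there is nothing to prove, and otherwise $\rho$ is genuine, so I may choose a lattice point $\bm y_0\in B_\rho$, i.e.\ $\iota(\bm y_0)\in\rho$; then $R(\rho,\bm x)\le r$ gives $\rmd_{\ell^1}(\bm p,\iota(\bm y_0))\le r+1$, a fortiori $\rmd_{\ell^2}(\bm p,\iota(\bm y_0))\le r+1$, so every such $\bm x$ lies in $\iota^{-1}(B_{r+2}(\iota(\bm y_0)))$, and since $\bm y_0$ is a lattice point the polynomial growth bound $Q_\Lambda(r+2)\le\kappa_\Lambda(r+3)^{l_\Lambda}$ bounds the number of such $\bm x$, counted with multiplicity, by $\kappa_\Lambda 3^{l_\Lambda}(1+r)^{l_\Lambda}$. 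Taking $\kappa_\bB$ to be the largest of the three constants thus produced finishes the proof.

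This is essentially bookkeeping, and the only step needing a little care — the ``main obstacle'', such as it is — is ensuring that \eqref{eqn:R(rho.x)} is legitimately available wherever it is invoked: one must observe that the hypothesis already entails $R<\infty$ and hence genuineness of $\rho$, and, in part (2), that genuineness is precisely what supplies a lattice point inside $B_\rho$ at which the polynomial-growth estimate may be centred. Everything else is counting integers in bounded intervals.
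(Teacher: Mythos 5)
Your proof is correct and takes essentially the same approach the paper intends: the paper simply asserts the lemma is an immediate consequence of \eqref{eqn:R(rho.x)} and the lower bound $\inf_{\bm y\in\Lambda}R(\rho,\bm y)\ge\diam(\rho)/(2\sqrt{l_\Lambda})-1$, and your argument is the natural unpacking of that deduction, reducing each count to integers in bounded intervals or lattice points in a Euclidean ball. The only slip is cosmetic — since $m_i\notin[n_i,m_i)$, the supremum dominates $|p_i-(m_i-1)|$ rather than $|p_i-m_i|$, shifting a constant by one — which does not affect the $O((1+r)^{2l_\Lambda})$ conclusion.
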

We remark that the above inequalities are still valid up to the choice of the constant $\kappa_{\bB}$ even if the metric of $\Lambda$ is replaced with another that is quasi-isometric to $\rmd$. 

\subsection{Quantum spin system}\label{subsection:quantum.spin.system}
The formulation of the quantum spin system in this paper basically follows the one of \cites{kapustinLocalNoetherTheorem2022,artymowiczQuantizationHigherBerry2023}. 
A local generator of a lattice Hamiltonian is indexed by lattice points $\bm{x} \in \Lambda$. 
However, there are some minor and technical differences. 
First, the Fr\'{e}chet algebra of almost local operators defined by using different seminorms, which is more compatible with the analysis of \cites{nachtergaeleQuasilocalityBoundsQuantum2019,moonAutomorphicEquivalenceGapped2020}.
Indeed, as is observed in \cref{prp:brick}, our formulation is equivalent to the common one in the literature  \cites{bratteliOperatorAlgebrasQuantum1997,nachtergaeleQuasilocalityBoundsQuantum2019,moonAutomorphicEquivalenceGapped2020}, which is described in terms of interactions indexed by finite subsets of $\Lambda$. 
Second, quantum spin systems are considered on a general weakly uniformly discrete metric space in $\bR^d$ that may not be Delone. 

This subsection is dedicated to verifying that the results of \cite{kapustinLocalNoetherTheorem2022} and \cites{nachtergaeleQuasilocalityBoundsQuantum2019,moonAutomorphicEquivalenceGapped2020} apply to our modified setting. It also serves as a survey to make the paper self-contained.

Let $\Lambda$ be weakly uniformly discrete in the Euclidean space $\bR^{l_{\Lambda}}$. 
Let us place a matrix algebra $\cA_{\bm{x}} =M_{n_{\bm{x}}}(\bC) =\cB(\bC^{n_{\bm{x}}})$ on each $\bm{x} \in {\Lambda}$. Here, we have made no assumptions on the uniform boundedness of the internal degrees of freedom $n_{\bm{x}} \in \bZ_{\geq 2}$. 
The observable algebras are defined by
\begin{align*} 
\cA^{\alg}_{{\Lambda}} \coloneqq \bigotimes_{\bm{x} \in {\Lambda}} \cA_{\bm{x}}, \quad \cA_{{\Lambda}}\coloneqq \overline{\cA^{\alg}_{{\Lambda}}},
\end{align*}
where the symbol $\bigotimes $ stands for the algebraic infinite tensor product (that is, the union of finite tensor products), and the completion is taken with respect to the operator $C^*$-norm. 
For a subset $Z \subset \Lambda$, let $\cA^{\alg}_{Z}$ and $\cA_{Z}$ denote the subalgebra $\bigotimes _{\bm{x} \in Z} \cA_{\bm{x}}$ of $\cA_{{\Lambda}}^{\alg}$ and its closure respectively. 
Let 
\begin{align*}
    \Pi_Z \coloneqq \id_{\cA_{Z}} \otimes \mathrm{tr}_{\cA_{Z^c}} \colon \cA_{{\Lambda}} \to \cA_Z
\end{align*}
denote the conditional expectation given by tracing out the tensor complement $\cA_{Z^c}$ of $\cA_Z$ in $\cA_{\Lambda}$ via the unique trace $\mathrm{tr}_{\cA_{Z^c}}$ normalized so that $\mathrm{tr}_{\cA_{Z^c}}(1)=1$. 
For $\bm{x} \in \Lambda$ and $r>0$, we use the abbreviated notation $\Pi_{\bm{x},r} \coloneqq \Pi_{B_r(\bm{x})}$. 

\begin{rmk}\label{rmk:CE.average}
    For a subset $Z \subset \Lambda$, let $\cU(\cA_{Z})$ denote the unitary group of the matrix algebra $\cA_Z$. If $\# Z<\infty$, then $\cU(\cA_{Z})$ is a compact group and hence has the unique normalized Haar measure.  
    For $a \in \cA_{\Lambda}^{\al}$, the conditional expectation $\Pi_Y$ is given by the integral
     \[
     \Pi_Y(a) = \lim_{Z \to Y^c}\int_{u \in \cU(\cA_Z)} uau^* \ du.
     \]
     This description of $\Pi_Y$ concludes that 
     \begin{align*}
         \| a - \Pi_{Y}(a) \| \leq 
         \lim_{Z \to Y^c}\int_{u \in \cU(\cA_Z)} \| uau^*  -a \| \ du \leq \sup_{u \in \cU(\cA_Y^c)}  \| uau^*-a\|. 
     \end{align*}
\end{rmk}

\begin{defn}\label{defn:Ffunction}
We define the class of functions on $\bR_{\geq 0}$ as
\begin{align*}
    \cF_0 \coloneqq {}&{}\{ f_{\mu} (r) \coloneqq \exp(-r^\mu ) \mid 0 \leq  \mu < 1 \},\\
    \cF \coloneqq {}&{} \{ f \colon \bR_{\geq 0} \to \bR_{\geq 0} \mid \text{$f$ is monotonously decreasing, $f(r) \geq C \cdot f_{\mu } (r)$ for some $0 \leq \mu <1$}\}.
\end{align*}
We call $a \in \cA_{{\Lambda}}$ is \emph{almost local} at $\bm{x} \in \Lambda$ if
    \begin{align*}
        \| a \|_{\bm{x}, f } \coloneqq  \max \Big\{ \| a\|  , \sup_{r \in \bR_{>0} } \big( f(r)^{-1} \cdot \| a - \Pi_{\bm{x},r}(a) \| \big) \Big\} < \infty
    \end{align*}
for any $ f \in \cF$.  
The subspace of almost local operators $\cA_{\Lambda}^\al \subset \cA_{{\Lambda}}$ is imposed the Fr\'echet topology by the (semi-)norms 
 $\| \blank \|_{\bm{x},f}$ for any $f \in \cF$. We use the abbreviated notation $\| \blank \|_{\bm{x},\mu } \coloneqq \| \blank \|_{\bm{x},f_{\mu}}$. 
\end{defn}

\begin{rmk}\label{rmk:seminorms}
    The subset 
\begin{align*} 
    \cF_1 = \{ f_{\nu,\mu} (r) = (1+r)^{-\nu }\exp (-r^\mu) \mid \nu >l_{\Lambda} +1,0 \leq \mu <1 \} \subset \cF,
\end{align*}
    determines the same topology as $\cF$ and $\cF_0$, i.e., an operator $a \in \cA_{\Lambda}$ is almost local if and only if $\|a\|_{\bm{x},f_{\nu,\mu}} <\infty$ for any $\nu,\mu$.
    In the same way as \cref{defn:Ffunction}, we use the abbreviated notation $ \| \blank \|_{\bm{x},\nu,\mu } \coloneqq \| \blank \|_{\bm{x},f_{\nu,\mu}}$. 
    By \cref{prp:Ffunction}, $\cF_1$ consists of $F$-functions.
\end{rmk}

\begin{rmk}\label{rmk:polynomial.coarse.invariance}
    For $0 \leq \mu < 1$, we introduce the notation $\mu_0 =\mu$ and $\mu_k \coloneqq (1+\mu_{k-1})/2$ inductively, to ensure that $\{\mu_k\}_{k \in \bN}$ is an increasing sequence in $[0,1)$. 
    For $A >1$, $\nu >l_{\Lambda}+1$ and $0\leq \mu<1$, we have
    \begin{align}
        c_{\nu,\mu,A} \coloneqq \sup_{r>0} f_{\nu,\mu_1}(r) \cdot f_{\nu,\mu}(Ar )^{-1} < \infty, \label{eqn:const.small.c}
    \end{align}
    that is, $f(A^{-1}r) \in \cF$ for any $f \in \cF$. By $(1+Ar+B)(1+r)^{-1} \leq \max \{ A, 1+B\}$ and the submultiplicativity of $f_{\mu}$ observed in the proof of \cref{prp:Ffunction}, we have
    \begin{align}
    \begin{split}
        {}&{}\sup_{r>B} f_{\nu,\mu_1}(A^{-1}r-A^{-1}B) \cdot f_{\nu,\mu}(r)^{-1} = \sup_{r>0} f_{\nu,\mu_1}(r) \cdot f_{\nu,\mu}(Ar+B )^{-1}  \\
        \leq {}&{} \sup_{r>0} f_{\mu_1}(r) \cdot f_{\mu}(Ar)^{-1}f_{\mu}(B)^{-1} \cdot (1+Ar+B)^{\nu}(1+r)^{-\nu}\\
        \leq {}&{} c_{0,\mu,A} \cdot f_{\mu}(B)^{-1} \cdot \max \{ A^\nu,(1+B)^\nu\}  < \infty, 
    \end{split}\label{eqn:large.scale.Lipschitz.F}
    \end{align}

    By \eqref{eqn:large.scale.Lipschitz.F}, the almost local subalgebra satisfies the following functoriality with respect to a large-scale Lipschitz map: 
    For an injective large-scale Lipschitz map $F \colon \Lambda \to \Lambda '$ and a family of unital $\ast$-homomorphisms $\phi_{\bm{x}} \colon \cA_{\bm{x}} \to \cA_{F(\bm{x})}$, the $\ast$-homomorphism $\bigotimes_{\bm{x}} \phi_{\bm{x}} \colon \cA_{\Lambda } \to \cA_{\Lambda '}$ sends $\cA_{\Lambda}^{\rm al}$ to $\cA_{\Lambda'}^{\al}$.
    In particular, for a weakly uniformly discrete $\Lambda $ in $\bR^{l_{\Lambda}}$, the Frech\'{e}t topology is unchanged if the metric on $\bR^l$ is replaced with the rescaled metric $A \cdot\rmd(\bm{x},\bm{y})$ for some $A \in \bR_{>0}$, or the $\ell^p$-metric for any $p \in [1,\infty]$. 
\end{rmk}

The space $\cA_{\Lambda}^{\al}$ forms a Fr\'echet $\ast$-algebra. Indeed, we have the inequality
\begin{align}
\begin{split}
    &f(r)^{-1} \| ab - \Pi_{\bm{x},r}(ab)\| \\
    \leq {} &f(r)^{-1}   (\|(\id - \Pi_{\bm{x},r})(a \cdot  (b-\Pi_{\bm{x},r}(b)))\| + \| (a-\Pi_{\bm{x},r}(a) ) \cdot \Pi_{\bm{x},r}(b)  \|) \\
    \leq{} & 2\| a \| \cdot \| b\|_{\bm{x},f}+\|a\|_{\bm{x},f} \cdot \| b\| \leq 3\|a\|_{\bm{x},f} \cdot \| b\|_{\bm{x},f}.
\end{split}\label{eqn:Frechet.algebra}
\end{align}
Here, we use the properties of the conditional expectation; $\Pi_{\bm{x},r}(a\Pi_{\bm{x},r}(b)) = \Pi_{\bm{x},r}(a) \Pi_{\bm{x},r}(b)$ and $\| \Pi_{\bm{x},r}(b)\| \leq \| b\| $. 
We remark that, for $a \in \cA_{\Lambda}$, the boundedness of $\| a \|_{\bm{x}, f}$ is independent of the choice of $\bm{x} \in \Lambda $. 

\begin{rmk}\label{rmk:almost.local.unitary}
    A unitary $u \in \cA_{\Lambda}$ is almost local if and only if there is a sequence of unitaries $u_r \in \cA_{B_r(\bm{x})}$ for $r \in \bZ_{\geq 1}$ such that $\sup_{r} f(r)^{-1}\| u_r -1 \| <\infty$ for any $f \in \cF$ and $u = \lim_{r \to \infty}u_{[1,r]}$, where $u_{[a,b]}\coloneqq u_b u_{b-1} \cdots u_{a+1}u_a$, in the norm topology. 
    The ``if part'' is verified as
    \[
    \| u - \Pi_{\bm{x},r}(u) \| \leq 2 \| u-u_{[1,r]} \| \leq \sum_{n=r+1}^\infty 2\| u_n -1\| \leq \sum_{n=r+1}^\infty 2C \cdot f_{\nu+l+1,\mu}(n) \leq  2C\kappa_{\Lambda} 2^{l_{\Lambda}+1} \cdot f_{\nu,\mu}(r+1)
    \]
    for some  $C >0$. To see the ``only if part'', observe that an operator $a \in \cA_{\Lambda}$ with $\| a -1 \| \leq 1/6$ satisfies $\| a \| <2 $, and hence $\| a^*a-1 \| \leq \|a^* \| \cdot \| a-1\| + \|a^*-1\| \leq 3 \| a-1\| \leq 1/2$.
    This estimate implies $\mathrm{spec}(a^*a) \subset [1/2,3/2]$, and hence $\| (a^*a)^{-1/2} -1 \| \leq \| a^*a-1 \| \leq 3\|a-1\|$ (note that $|x^{-1/2}-1| \leq |x-1|$ on $[1/2,3/2]$). In conclusion, we obtain that 
    \begin{align}
    \begin{split}
    \| a (a^*a)^{-1/2} -1 \| \leq {}&{} \| a\| \cdot \| (a^*a)^{-1/2} -1 \| + \|a -1\|
    \leq 7 \| a -1\|.
    \end{split}
    \label{eqn:polar.decomposition.norm}
    \end{align}
    Let $a_r \coloneqq \Pi_{\bm{x},r}(u)$ and apply the above inequality to $u^*a_r$. Then we obtain that $v_r \coloneqq a_r (a_r^*a_r)^{-1/2}$ satisfies $\| v_r -u \| = \| u^*v_r -1\| \leq 7 \| u^*a_r -1 \| \leq 7f(r) \cdot \| u \|_{\bm{x},f}$ for any $f \in \cF$. Now, let $v_0 \coloneqq 1$ and $u_r \coloneqq v_r v_{r-1}^* \in \cA_{B_r(\bm{x})}$. Then $\{ u_r\}_{r \in \bZ_{\geq 1}}$ is the desired sequence of unitaries.  

    Let $u \in \cA_{\Lambda}^{\al}$ be a unitary such that $-1$ is not contained in its spectrum. 
    Then $g \coloneqq \log (u)$ is the limit of $g_r \coloneqq \log (u_{[1,r]})$. 
    This limit is indeed the almost local convergence, which shows $ g \in \cA_{\Lambda}^{\al}$, since $\| g - g_r \|$ is bounded by a constant multiple of $\|u- u_{[1,r]}\|$. 
    Conversely, the same argument shows that the exponential of a skew-adjoint operator $g \in \cA_{\Lambda}^{\al}$ is again almost local. 
    More strongly, for a smooth function $g \colon [0,1] \to \cA_{\Lambda}^{\al}$ taking value in skew-adjoint operators, the solution $u(t)$ of the ordinary differential equation $u(t)^* \frac{d}{dt}u(t) = g(t)$, which is a priori a unitary in $\cA_{\Lambda}$, is almost local. Indeed, by \cref{rmk:CE.average}, we have
    \begin{align*}
    \|u(t) - \Pi_{\bm{x},r}(u(t)) \| \leq {}&{}
    \sup_{v \in \cU(\cA_{B_r(\bm{x})^c})} \| vu(t)v^* -u(t) \| 
    = \sup_{v \in \cU(\cA_{B_r(\bm{x})^c})} \| u(t)^*vu(t) -v \| \\
    \leq {}&{} \sup_{v \in \cU(\cA_{B_r(\bm{x})^c}) }\int_0^t \| [v,g(s)] \| ds \leq f(r) \cdot \| g\|_{\bm{x},f} \cdot t 
    \end{align*}
    for any $a \in \cA_{B_r(\bm{x})^c}$ and $f \in \cF$. 
\end{rmk}

\begin{defn}\label{defn:almost.local.Hamiltonian}
A \emph{uniformly almost local (UAL)} derivation is a linear map
\begin{align*} 
    \sfad (\sfG) = [\sfG,\blank ] \coloneqq  \sum_{\bm{x}\in {\Lambda}} [\sfG_{\bm{x}}, \blank ] \colon \cA_{\Lambda}^{\al} \to \cA_{\Lambda}
\end{align*}
    given by a collection $\sfG\coloneqq (\sfG_{\bm{x}})_{\bm{x} \in \Lambda }$ of skew-adjoint operators in $\cA_{\Lambda}^{\al}$, a local generator of the derivation,  such that
\begin{align*}
    \vvert \sfG \vvert _f \coloneqq  \sup _{\bm{x} \in \Lambda } \big\| \sfG_{\bm{x}} \big\|_{\bm{x} , f} < \infty.
\end{align*}
We write $\fDer ^{\al}_{\Lambda}$ for the set of local generators of UAL derivations. 
In this paper, we do not distinguish between a UAL derivation and its local generator in terminology and refer to the latter as a UAL derivation. 
\end{defn}

The above notation $\fDer^{\al}_{\Lambda}$ is inconsistent with that in \cites{kapustinLocalNoetherTheorem2022,artymowiczQuantizationHigherBerry2023}, where $\fD_{\al}$ refers to the set of UAL derivations, and the set we denote by $\fDer_{\Lambda}^{\al}$ is represented by $C_0(\fd_{\al})$.

The infinite sum in the definition of $\sfad(\sfG)$ exists since
\begin{align}
\begin{split}
    \| [a , b] \| \leq {}&{} \big\| [a, b-\Pi_{\bm{y},\rmd(\bm{x},\bm{y})/2}(b)] \big\| + \big\| [a - \Pi_{\bm{x},\rmd(\bm{x},\bm{y})/2} (a) , \Pi_{\bm{y},\rmd(\bm{x},\bm{y})/2}(b)] \big\|\\
    \leq {}&{} 2 f(\rmd(\bm{x},\bm{y})/2) \cdot  \|a\|_{\bm{x},f} \cdot \| b \|_{\bm{y},f}.
\end{split}\label{eqn:derivation.welldef}
\end{align}
Indeed, the estimate \eqref{eqn:derivation.welldef} for $a = \sfG_{\bm{x}}$ shows that the operator norms $\| [\sfG_{\bm{x}},a ]\|$ decay quickly as $\rmd(\bm{x} ,\bm{y} ) \to \infty$ so that the infinite sum $\sum_{\bm{x} \in \Lambda }\| [\sfG_{\bm{x}},a]\|$ converges.

\begin{lem}\label{lem:almost.local.continuity}
For $\sfG \in \fDer ^{\al}_{\Lambda}$, the associated UAL derivation $\sfad (\sfG) $ acts on $\cA_{\Lambda}^{\al}$ continuously with respect to its Fr\'{e}chet topology. 
\end{lem}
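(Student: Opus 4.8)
The plan is to establish the two facts packaged in the statement: that $\sfad(\sfG)$ maps $\cA_{\Lambda}^{\al}$ into itself, and that it is continuous, i.e.\ that for every seminorm $\|\blank\|_{\bm{x},f}$ of the target there are a seminorm $\|\blank\|_{\bm{x},f'}$ of the source and a constant $C$ (depending only on $\vvert\sfG\vvert_{f'}$ and the growth data of $\Lambda$) with $\|\sfad(\sfG)(a)\|_{\bm{x},f}\leq C\|a\|_{\bm{x},f'}$. By \cref{rmk:seminorms} it suffices to treat the seminorms indexed by $\cF_1$, and following \cref{rmk:polynomial.coarse.invariance} I would allow $f'=f_{\nu',\mu_1}$ to decay strictly faster than $f=f_{\nu,\mu}$, with $\nu'\coloneqq\nu+l_{\Lambda}+1$ and $\mu_1=(1+\mu)/2$. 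The bound on the plain norm part of $\|\blank\|_{\bm{x},f}$ is already contained in the convergence estimate \eqref{eqn:derivation.welldef} and the $F$-function property from \cref{prp:Ffunction}: summing $\|[\sfG_{\bm{y}},a]\|\leq 2f'(\rmd(\bm{x},\bm{y})/2)\vvert\sfG\vvert_{f'}\|a\|_{\bm{x},f'}$ over $\bm{y}\in\Lambda$ gives $\|\sfad(\sfG)(a)\|\leq C\vvert\sfG\vvert_{f'}\|a\|_{\bm{x},f'}$.

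For the decay part, fix $\bm{x}$ and $r>0$ and split $\sfad(\sfG)(a)=c+c'$ with $c\coloneqq\sum_{\bm{y}\in B_{r/2}(\bm{x})}[\sfG_{\bm{y}},a]$ the near block and $c'$ the far one. For $c'$, the summand estimate \eqref{eqn:derivation.welldef} together with \cref{lem:sum.exponential} (splitting off half of the stretched-exponential factor of $f_{\nu',\mu_1}$, so as to produce both the polynomial decay $(1+r)^{-\nu}$ and a factor comparable to $e^{-r^{\mu}}$) yields $\|c'\|\leq Cf_{\nu,\mu}(r)\vvert\sfG\vvert_{f'}\|a\|_{\bm{x},f'}$, hence $\|c'-\Pi_{\bm{x},r}(c')\|\leq 2\|c'\|$. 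For $c$, I would approximate it by $c_0\coloneqq\sum_{\bm{y}\in B_{r/2}(\bm{x})}[\Pi_{\bm{y},r/2}(\sfG_{\bm{y}}),\Pi_{\bm{x},r}(a)]$, which lies in $\cA_{B_r(\bm{x})}$ because $B_{r/2}(\bm{y})\subset B_r(\bm{x})$ for $\bm{y}\in B_{r/2}(\bm{x})$; hence $\Pi_{\bm{x},r}$ fixes $c_0$ and $\|c-\Pi_{\bm{x},r}(c)\|\leq 2\|c-c_0\|$. Term by term, $[\sfG_{\bm{y}},a]-[\Pi_{\bm{y},r/2}(\sfG_{\bm{y}}),\Pi_{\bm{x},r}(a)]=[\sfG_{\bm{y}}-\Pi_{\bm{y},r/2}(\sfG_{\bm{y}}),a]+[\Pi_{\bm{y},r/2}(\sfG_{\bm{y}}),a-\Pi_{\bm{x},r}(a)]$, whose norm is at most $2f'(r/2)\vvert\sfG\vvert_{f'}\|a\|_{\bm{x},f'}+2\vvert\sfG\vvert_{f'}f'(r)\|a\|_{\bm{x},f'}$; summing over the at most $\kappa_{\Lambda}(1+r/2)^{l_{\Lambda}}$ points of $B_{r/2}(\bm{x})$ and absorbing the polynomial factor into the faster decay of $f_{\nu',\mu_1}$ gives $\|c-\Pi_{\bm{x},r}(c)\|\leq Cf_{\nu,\mu}(r)\vvert\sfG\vvert_{f'}\|a\|_{\bm{x},f'}$. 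Adding the two contributions, dividing by $f_{\nu,\mu}(r)$ and taking the supremum over $r>0$ gives the desired seminorm bound; finiteness of $\vvert\sfG\vvert_{f'}$ and $\|a\|_{\bm{x},f'}$ for all $f'\in\cF$ then also shows $\sfad(\sfG)(a)\in\cA_{\Lambda}^{\al}$.

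The main obstacle is the unavoidable loss of decay rate in the near block: truncating $\sfG_{\bm{y}}$ to $B_{r/2}(\bm{y})$ produces the factor $f'(r/2)$, not $f'(r)$, and $f_{\nu,\mu}(r/2)$ is not bounded by a multiple of $f_{\nu,\mu}(r)$. This is what forces the source seminorm to have a strictly larger stretched-exponential exponent $\mu_1>\mu$ (and larger polynomial order, to absorb the volume $\kappa_{\Lambda}(1+r)^{l_{\Lambda}}$); since $\vvert\sfG\vvert_{f'}$ and $\|a\|_{\bm{x},f'}$ are finite for all $f'$ by hypothesis, this costs only bookkeeping and is precisely the mechanism isolated in \cref{rmk:polynomial.coarse.invariance}. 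A subsidiary point requiring care is that all seminorms must remain based at the single point $\bm{x}$ while the generators $\sfG_{\bm{y}}$ are naturally almost local at the running point $\bm{y}$; the two-point commutator estimate \eqref{eqn:derivation.welldef} is exactly what reconciles this.
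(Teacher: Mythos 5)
Your proposal is correct and takes essentially the same approach as the paper's own proof: a near/far split at radius $r/2$, the commutator bound \eqref{eqn:derivation.welldef} and the summation estimate of \cref{lem:sum.exponential} for the far block, local truncation for the near block, and absorption of the volume factor $\kappa_{\Lambda}(1+r/2)^{l_{\Lambda}}$ and the degradation from $r$ to $r/2$ into the faster-decaying source seminorm indexed by $(\nu_1,\mu_1)$ rather than $(\nu,\mu)$. The only cosmetic difference is in how the near block is localized: you compare $c$ with the explicitly truncated commutator $c_0$ and use $\|c-\Pi_{\bm{x},r}(c)\|\leq 2\|c-c_0\|$, while the paper estimates $\|[\sfG_{\bm{y}},a]-\Pi_{\bm{x},r}([\sfG_{\bm{y}},a])\|$ term by term via the conditional-expectation identities of \eqref{eqn:Frechet.algebra} together with $\|a-\Pi_X(a)\|\leq 2\|a-\Pi_Y(a)\|$ for $Y\subset X$ — the same mechanism in different notation.
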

\begin{proof}
We show $\sfad(\sfG)(\cA^{\rm al}_{\Lambda}) \subset \cA_{\Lambda}^{\rm al} $. 
For $r>0$, $\bm{x},\bm{y} \in \Lambda$ such that $\rmd(\bm{x},\bm{y}) \leq r/2$, and $a,b \in \cA_{\Lambda}^{\al}$,  \eqref{eqn:const.small.c} and \eqref{eqn:Frechet.algebra} show that
\begin{align*}
\begin{split}
    \| [a, b] - \Pi_{\bm{x},r}([a,b])  \| 
    \leq 
    {}&{} \| ab - \Pi_{\bm{x},r}(ab)\| + \| ba - \Pi_{\bm{x},r}(ba)\|  \\
    \leq {}&{}  2 \cdot \| a\| \cdot \| (b - \Pi_{\bm{x},r}(b))\| + 2 \cdot \| (a - \Pi_{\bm{x},r}(a))\| \cdot \| \Pi_{\bm{x},r}(b)\|  \\ 
    \leq {}&{} 4 \cdot \| a\| \cdot \| (b - \Pi_{\bm{y},r/2}(b))\| + 4 \cdot \| (a - \Pi_{\bm{x},r/2}(a))\| \cdot \| \Pi_{\bm{x},r}(b)\|\\
    \leq {} &{} 8 f_{\nu,\mu_1}(r/2) \cdot  \| b\|_{\bm{y},\nu,\mu_1} \cdot \| a \|_{\bm{x},\nu,\mu_1 }\\
    \leq{}&{}  8 c_{\nu,\mu,2} \cdot f_{\nu,\mu}(r) \cdot \| b\|_{\bm{y},\nu,\mu_1} \cdot \| a \|_{\bm{x},\nu,\mu_1 }.
\end{split}
\end{align*}
Here, the inequality $\| a - \Pi_X(a)\| \leq \| a - \Pi_Y(a) \| + \| \Pi_X(a - \Pi_Y(a)) \| \leq 2 \| a - \Pi_Y(a)\|$ for $Y \subset X \subset \Lambda $ is also used. 
On the other hand, by letting $\nu_1 \coloneqq \nu+l_{\Lambda}+1$, \eqref{eqn:derivation.welldef} and \cref{lem:sum.exponential} show that 
\begin{align*}
    \sum_{\bm{y} \in \Lambda \setminus B_{r/2}(\bm{x})} \| [\sfG_{\bm{y}}, a] \| 
    \leq {}& \sum_{\bm{y} \in \Lambda \setminus B_{r/2}(\bm{x})} 2 \cdot  f_{\nu_1,\mu_2}(\rmd(\bm{x},\bm{y})/2)\cdot 
    \vvert \sfG \vvert _{\nu_1,\mu_2 } \cdot \|a\|_{\bm{x}, \nu_1,\mu_2} \\
    \leq {}& \sum_{\bm{y} \in \Lambda \setminus B_{r/2}(\bm{x})} 2c_{\nu_1,\mu_2,2} \cdot  f_{\nu_1,\mu_1}(\rmd(\bm{x},\bm{y}))\cdot 
    \vvert \sfG \vvert _{\nu_1,\mu_2 } \cdot \|a\|_{\bm{x}, \nu_1,\mu_2} \\    \leq {}& 2^{l_{\Lambda}+2} \kappa_{\Lambda} \cdot c_{\nu_1,\mu_2,2} \cdot f_{\nu,\mu_1}(r/2) \cdot \vvert \sfG \vvert _{\nu_1,\mu_2} \cdot \|a\|_{\bm{x}, \nu_1,\mu_2}\\
    \leq {}& 2^{l_{\Lambda}+2} c_{\nu_1,\mu_1,2}c_{\nu,\mu,2}\kappa_{\Lambda} \cdot f_{\nu,\mu}(r) \cdot \vvert \sfG \vvert _{\nu_1,\mu_2}  \cdot \|a\|_{\bm{x}, \nu_1,\mu_2}.
\end{align*}
By these inequalities, for $a \in \cA_\Lambda^{\al }$ and $\bm{x} \in { \Lambda}$, we obtain that 
\begin{align*}
    &f_{\nu,\mu}(r)^{-1} \cdot \| [\sfG,a] - \Pi_{\bm{x},r}([\sfG,a])\|\\
     \leq {}& f_{\nu,\mu}(r)^{-1} \cdot \bigg(\sum_{\bm{y} \in B_{r/2}(\bm{x})} \| [\sfG_{\bm{y}}, a] - \Pi_{\bm{x},r}([\sfG_{\bm{y}}, a]) \|  + \sum_{\bm{y} \in \Lambda \setminus B_{r/2}(\bm{x})} 2\| [\sfG_{\bm{y}}, a] \| \bigg)   \\
     \leq {}& 8 c_{\nu_1,\mu , 2}  \vvert \sfG \vvert_{\nu_1,\mu_1} \cdot \| a \|_{\bm{x},\nu_1,\mu_1}\cdot f_{\nu,\mu}(r)^{-1}f_{\nu_1,\mu}(r)  \cdot \# B_{r/2}(\bm{x}) \\
     {}&{} \qquad + 2^{l_{\Lambda}+3} c_{\nu_1,\mu_1,2}c_{\nu,\mu,2} \kappa_{\Lambda} \cdot \vvert \sfG \vvert_{\nu_1,\mu_2} \cdot \| a \|_{\bm{x},\nu_1,\mu_2}\\
     \leq {}&{}  d_{\nu,\mu} \cdot \vvert \sfG \vvert_{\nu_1,\mu_2} \cdot \| a \|_{\bm{x},\nu_1,\mu_2} <\infty,
\end{align*}
for any $R>0$, where $d_{\nu,\mu} \coloneqq  8\kappa_{\Lambda} c_{\nu_1,\mu,2}    + 2^{l_{\Lambda}+3} c_{\nu_1,\mu_1,2}c_{\nu,\mu,2} \kappa_{\Lambda}  $. In the third inequality, we have used 
\[
    f_{\nu,\mu}(r)^{-1}f_{\nu_1,\mu}(r) \cdot \# B_{r/2}(\bm{x}) \leq (1+r)^{-l_{\Lambda}-1}  \cdot \kappa_{\Lambda} (1+r/2)^{l_{\Lambda}} \leq 1. \qedhere 
\]
\end{proof}

For $\sfG \in \fDer ^{\al}_{\Lambda}$, the unbounded derivation $\sfad(\sfG)$ on the C*-algebra $\cA_{\Lambda}$ with the domain $\cA_{\Lambda}^{\mathrm{al}}$ is closable by \cite{bratteliOperatorAlgebrasQuantum1997}*{Lemma 3.1.14 and Proposition 3.2.22}. 
To justify this, we note that the square root of a positive almost local operator remains almost local. 
This follows from the inequality $\|a^{1/2} - b^{1/2}\| \leq \|a - b\|^{1/2}$ for any $a, b \geq 0$, which can be found in \cite{bhatiaMatrixAnalysis1997}*{Theorem~X.1.1}. Applying this inequality to $a \in \cA_\Lambda^{\al}$ and $b = \Pi_{\bm{x}, r}(a)$ yields the desired conclusion.
We use the same letter $\sfad(\sfG)$ for its closure. 
It generates a $1$-parameter group of $\ast$-automorphisms on $\cA_{\Lambda}$ which is explicitly given by $\alpha_{t}(a) = \lim_{Z \to \Lambda} e^{t \sfG_Z}a e^{-t\sfG_{Z}}$. 
\begin{rmk}
We add an explanation for the fact that $\sfad(\sfG)$ generates a $1$-parameter family of automorphism group.
First, it is proved in \cite{nachtergaeleQuasilocalityBoundsQuantum2019}*{Subsection 3.1} that the limit $\alpha_t(a)$ exists (we discuss this point later in \cref{rmk:parallel.transport.existence.uniqueness}). 
Its infinitesimal generator, say $\delta$, is also a closed unbounded derivation on $\cA_{\Lambda}$. 
The domain of $\delta$ contains $\cA_{\Lambda}^{\al}$ and $\delta(a)=\sfad (\sfG)(a)$ for any $\cA_{\Lambda}^{\al}$. Moreover, as will be observed in \cref{prp:Lieb.Robinson} (2), $\alpha_{t}$ preserves the subalgebra $\cA_{\Lambda}^{\al}$. 

These facts conclude that $\sfad(\sfG)$ also generates a $1$-parameter group of $\ast$-automorphisms, and hence $\sfad(\sfG) = \delta$ (in other words, $\cA_{\Lambda}^{\al}$ is a core of $\delta$). 
First, by \cite{bratteliOperatorAlgebrasQuantum1997}*{Theorem 3.2.50}, we have $\| a + [\sfG,a] \| \geq \liminf_{Z \to \Lambda} \| a + [\sfG_Z,a]\| \geq \|a\|$ for any $a \in \cA_{\Lambda}^{\al}$. Thus, again by \cite{bratteliOperatorAlgebrasQuantum1997}*{Theorem 3.2.50}, it suffices to show that $(\id +c \cdot \sfad(\sfG))(\cA_{\Lambda}^{\al})$ is dense in $\cA_{\Lambda}$ for any $c \in \bR \setminus \{0\}$. 
Assume the contrary. Then, there is a linear functional $\eta \colon \cA_{\Lambda} \to \bC$ such that $\eta(a +c [\sfG,a]) =0$ for any $a \in \cA_{\Lambda}^{\al}$. This means that $\frac{d}{dt}\eta(\alpha_{t}(a)) = \eta([\sfG,\alpha_{t}(a)]) = -c \eta(\alpha_{t}(a))$, and hence $\eta(\alpha_{t}(a)) = \eta(a) \cdot e^{-c t}$. This contradicts to the fact that any state on a C*-algebra is norm contractive (cf.\ \cite{murphyAlgebrasOperatorTheory1990}*{Theorem 3.3.6}).       
\end{rmk}

A collection of self-adjoint operators $\sfH =(\sfH_{\bm{x}})_{\bm{x} \in \Lambda}$ such that $\sfH_{\bm{x}} \in i \fDer ^{\al}_{\Lambda}$ serves as a lattice Hamiltonian generating the Heisenberg dynamics 
\[
    \tau_{\sfH,t}(a) \coloneqq \lim_{Z \to \Lambda} e^{-it \sfH_Z} a e^{it \sfH_Z}
\]
on $\cA_{{\Lambda}}$, which we call an \emph{UAL Hamiltonian}. 
A state, i.e.\ a bounded linear map $\omega \colon \cA_{\Lambda} \to \bC$ satisfying the positivity $\omega (a^*a) \geq 0$ and unitality $\omega (1)=1$, is said to be a ground state of $\sfH $ if $\omega (a^* [\sfH ,a]) \geq 0$ holds for any operator $a \in \cA_{\Lambda}^{\al}$. 
In terms of the GNS representation $(\sH _\omega, \pi _\omega , \Omega_\omega)$ of $(\cA_{\Lambda},\omega)$, this condition corresponds to the positivity of the generator $H_\omega $ of the $1$-parameter unitary group $U_\omega \colon \bR \to \cU(\sH_{\omega})$ given by $U_\omega (\pi_\omega(a)\Omega_\omega) = \pi_{\omega}(\tau_{\sfH,t}(a)) \Omega_\omega$. Such $U_{\omega}$ and $H_{\omega}$ are well-defined since a $\tau_{\sfH}$-ground state is $\tau_{\sfH}$-invariant (\cite{bratteliOperatorAlgebrasQuantum1997}*{Lemma 5.3.16}). 
Note that $\pi_\omega(\cA_{\Lambda}^{\al})\Omega_\omega \subset \mathop{\mathrm{Dom}}(H_\omega)$ is a core of $H_\omega$ and 
\begin{align*}
    H_\omega \pi_\omega(a)\Omega_{\omega} = 
    \pi_\omega \big( [\sfH, a]\big) \Omega_\omega 
    = 
    \sum_{\bm{x} \in \Lambda}\pi_{\omega}([\sfH_{\bm{x}},a])\Omega_{\omega }
\end{align*}
holds for any $a \in \cA_{\Lambda}^{\al}$. 
For $\gap >0$, the pair $(\sfH , \omega)$ of a UAL Hamiltonian $\sfH$ and its ground state $\omega$ is said to have a \emph{spectral gap} $\delta >0$ if the GNS Hamiltonian $H_\omega$ satisfies $\mathrm{spec}(H_\omega) \cap (0,\gap) = \emptyset$. 
Moreover, it is said to be \emph{non-degenerate} if $\Ker H_\omega = \bC \cdot \Omega_{\omega}$. 
A pair $(\sfH,\omega)$ is gapped and non-degenerate if and only if $\omega (a^* [\sfH,a]) \geq \gap \cdot \omega (a^*a)$ holds for any $a \in \cA_{\Lambda}^{\al}$ with $\omega (a)=0$. 

\begin{rmk}\label{rmk:Cstar}
A $\ast$-homomorphism $\phi \colon \cA_{\Lambda}^{\al} \to \cA_{\Lambda}^{\al}$ and a quantum state $\omega \colon \cA_{\Lambda}^{\al} \to \bC$ that are continuous with respect to almost local topology immediately extend to the C*-completion $\cA_{\Lambda}$ continuously. 
Indeed, for each bounded region $Z \subset \Lambda$, the restriction $\phi|_{\cA_{Z}}$ (resp.\ $\omega|_{\cA_Z}$) is a $\ast$-homomorphism (resp.\ state) on the matrix C*-algebra, and hence its norm measured by the C*-norm on $\cA_Z$ satisfies $\| \phi|_{\cA_Z}\| \leq 1$ (resp.\ $\| \omega|_{\cA_Z}\| \leq 1$). 
\end{rmk}

\begin{lem}
    A non-degenerate ground state $\omega_\sfH$ of a UAL Hamiltonian $\sfH$ is a pure state. 
\end{lem}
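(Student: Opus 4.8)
The plan is to prove that the GNS representation $\pi_\omega$ attached to $\omega \coloneqq \omega_\sfH$ is irreducible; purity of $\omega_\sfH$ is then immediate. I keep the notation $(\sH_\omega,\pi_\omega,\Omega_\omega)$ of the excerpt and write $U_\omega$ for the one-parameter unitary group with generator $H_\omega$, so that $U_\omega(t)\Omega_\omega=\Omega_\omega$ and $H_\omega\geq 0$. By non-degeneracy, $\Ker H_\omega=\bC\cdot\Omega_\omega$, so the spectral projection $P_0$ of $H_\omega$ onto the eigenvalue $0$ is the rank-one projection $\langle\Omega_\omega,\blank\rangle\,\Omega_\omega$.

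The one substantive ingredient I would invoke is the classical fact that, for a ground state, the implementing group is inner in the von Neumann algebra of the representation: $U_\omega(t)\in\pi_\omega(\cA_\Lambda)''$ for all $t$, equivalently $H_\omega$ is affiliated with $\pi_\omega(\cA_\Lambda)''$. This is the heart of the matter and is where the real work sits; it rests on the positivity of the energy together with the invariance and cyclicity of $\Omega_\omega$, via a Borchers-type analyticity argument applied to the matrix coefficients $t\mapsto\langle\xi,U_\omega(t)\,T\,U_\omega(-t)\,\eta\rangle$ with $T\in\pi_\omega(\cA_\Lambda)'$ and $\xi,\eta\in\pi_\omega(\cA_\Lambda)\Omega_\omega$ (note that $U_\omega(t)TU_\omega(-t)$ lies again in $\pi_\omega(\cA_\Lambda)'$ since $\tau_\sfH$ is an automorphism), which reduce to functions of the form $t\mapsto\langle\zeta,U_\omega(t)T\Omega_\omega\rangle$, extend to bounded entire functions, and are hence constant (see \cite{bratteliOperatorAlgebrasQuantum1997}). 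I expect this step to be the main obstacle, although for the purposes of this paper it can simply be quoted.

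Granting it, $H_\omega$ being affiliated with $\pi_\omega(\cA_\Lambda)''$ forces all of its spectral projections into the algebra, so $P_0\in\pi_\omega(\cA_\Lambda)''$; concretely one may also write $P_0=\lim_{T\to\infty}\frac{1}{2T}\int_{-T}^{T}U_\omega(t)\,\mathrm{d}t$ in the strong operator topology by von Neumann's mean ergodic theorem, the averages lying in the strong-operator-closed algebra $\pi_\omega(\cA_\Lambda)''$. Finally, for any $T'\in\pi_\omega(\cA_\Lambda)'$ one has $[T',P_0]=0$; applying this identity to $\Omega_\omega$ gives $T'\Omega_\omega=\lambda\Omega_\omega$ with $\lambda=\langle\Omega_\omega,T'\Omega_\omega\rangle$, whence $T'\pi_\omega(a)\Omega_\omega=\pi_\omega(a)T'\Omega_\omega=\lambda\,\pi_\omega(a)\Omega_\omega$ for every $a\in\cA_\Lambda$. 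Since $\Omega_\omega$ is cyclic, $T'=\lambda\cdot 1$; hence $\pi_\omega(\cA_\Lambda)'=\bC\cdot 1$, so $\pi_\omega$ is irreducible and $\omega_\sfH$ is pure. (If one prefers, the assertion that a ground state is pure precisely when $\Ker H_\omega=\bC\cdot\Omega_\omega$ can instead be cited directly from \cite{bratteliOperatorAlgebrasQuantum1997}.)
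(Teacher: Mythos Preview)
Your argument is correct and takes a genuinely different route from the paper's proof. The paper does not show irreducibility of $\pi_\omega$ directly; instead it invokes \cite{bratteliOperatorAlgebrasQuantum1997}*{Theorem 5.3.37} that an extremal point of the set of $\tau_\sfH$-ground states is pure, and then proves extremality: given another $\tau_\sfH$-ground state $\omega'\le c\omega$, it represents $\omega'$ by a positive $V\in\pi_\omega(\cA_\Lambda)'$, uses $\tau_\sfH$-invariance of $\omega'$ to get $e^{itH_\omega}V^{1/2}e^{-itH_\omega}=V^{1/2}$, and concludes $V^{1/2}\Omega_\omega\in\Ker H_\omega=\bC\Omega_\omega$, hence $\omega'=\omega$. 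Your approach instead quotes the Borchers-type result that $U_\omega(t)\in\pi_\omega(\cA_\Lambda)''$, deduces that the rank-one projection $P_0$ onto $\bC\Omega_\omega$ lies in $\pi_\omega(\cA_\Lambda)''$, and reads off triviality of the commutant. Both routes ultimately lean on the same circle of ideas in \cite{bratteliOperatorAlgebrasQuantum1997} (the extremal-implies-pure theorem itself relies on innerness of the implementing dynamics), so neither is more elementary; the paper's version is slightly more structural in that it locates $\omega_\sfH$ within the ground-state simplex, while yours goes straight to irreducibility. One small remark: your sketch of the analyticity step (``extend to bounded entire functions'') compresses the actual edge-of-the-wedge argument a bit, but since you explicitly offer the citation as the intended justification this is harmless.
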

\begin{proof}
    In the proof, we abbreviate $\omega_{\sfH}$ to $\omega$. 
    We use the fact that an extremal point of the space of $\tau_{\sfH}$-ground states is a pure state (\cite{bratteliOperatorAlgebrasQuantum1997}*{Theorem 5.3.37}). 
    Let $\omega' $ be another $\tau_{\sfH}$-ground state of $\sfH$ such that $\omega' \leq c \omega$ for some $c \geq 1$. 
    By a standard fact in operator algebra (that is found in e.g.\ \cite{murphyAlgebrasOperatorTheory1990}*{Theorem 5.1.2}), there is a unique positive operator $V$ in the commutant von Neumann algebra $\pi_{\omega}(\cA_{\Lambda})' \subset \cB(\sH_{\omega})$ such that $\omega'(a)=\langle \pi_{\omega}(a)V\Omega_{\omega} ,\Omega_{\omega}\rangle =\langle \pi_{\omega}(a)V^{1/2}\Omega_{\omega} ,V^{1/2}\Omega_{\omega}\rangle$. 
    By the uniqueness of $V$ and the $\tau_{\sfH}$-invariance of $\omega'$, the equality $e^{itH_{\omega}}V^{1/2}e^{-itH_{\omega}} = V^{1/2}$, and hence $e^{itH_{\omega}}V^{1/2}\Omega_{\omega} = V^{1/2}\Omega_{\omega}$, holds. 
    By the non-degeneracy of $\omega$, it means that $V^{1/2}\Omega_{\omega}$ is contained in $\Ker H_{\omega} = \bC \cdot \Omega_{\omega}$. 
    This shows $\omega'=\omega$, which concludes that $\omega$ is an extremal point in the space of $\tau_{\sfH}$-invariant states. 
\end{proof}
\begin{defn}
    We say that a pair $(\sfH, \omega_{\sfH})$ a \emph{gapped UAL Hamiltonian} with gap $\delta$ if $\sfH$ is a UAL Hamiltonian and $\omega_{\sfH}$ is a non-degenerate ground state of $\sfH$ such that $(\sfH , \omega_{\sfH})$ has a spectral gap $\delta$. 
    The set of gapped UAL Hamiltonians $(\sfH,\omega_{\sfH})$ with gap $1$ denotes $\fH_\Lambda^{\al}$. 
    We abuse the notation and write $\sfH$ to represent a pair $(\sfH,\omega_{\sfH})$.
\end{defn}

An aspect of this paper is that we do not assume the uniqueness of the ground state. 
The non-degeneracy of a gapped ground state $\omega$ is a weaker notion of uniqueness. 
Indeed, the non-degeneracy implies the local uniqueness (rigidity) of such a ground state; there is no `smooth' perturbation of $\omega$ as gapped ground states. This point will be explained in the proof of \cref{thm:automorphic.equivalence}. 

Let $\sfH_1$ and $\sfH_2$ be gapped UAL Hamiltonians with gap $\delta$ defined on the lattices $\Lambda_1$ and $\Lambda_2$ respectively. 
Their composite system $\sfH = \sfH_1 \boxtimes \sfH_2$ is defined by 
\begin{align}
\begin{split}
    \sfH_{\bm{x}} = (\sfH_1 \boxtimes \sfH_2)_{\bm{x}} \coloneqq  {}&{}
    \begin{cases}
        \sfH_{1,\bm{x}} \otimes 1 & \text{ if $\bm{x} \in \Lambda_1$, }\\
        1 \otimes \sfH_{2,\bm{x}} & \text{ if $\bm{x} \in \Lambda_2$, }
    \end{cases}\\
    \in {}&{} \cA_{\Lambda_1}^{\al}  \otimes \cA_{\Lambda_2}^{\al} = \cA_{\Lambda_1 \sqcup \Lambda_2}^\al, \\
    \omega_{\sfH} \coloneqq {}&{} \omega_{\sfH_1} \otimes \omega_{\sfH_2} \in \fS(\cA_{\Lambda_1 \sqcup \Lambda_2}).
\end{split}
\label{eqn:Hamiltonian.product}
\end{align}
Then, the associated time evolution is $\tau_{\sfH , t} = \tau_{\sfH_1 , t} \otimes \tau_{\sfH_2, t} $, and the GNS representation of the C*-dynamical system $(\cA_{\Lambda_1 \sqcup \Lambda_2} , \omega_{\sfH}, \tau_{\sfH})$ decomposes to the tensor product as 
\[
    (\sH_{\omega_{\sfH}},\pi_{\omega_{\sfH}}, \Omega_{\omega_{\sfH}},H_{\omega_{\sfH}} )  =
    (\sH_{\omega_{\sfH_1}},\pi_{\omega_{\sfH_1}}, \Omega_{\omega_{\sfH_1}},H_{\omega_{\sfH_1}} )
    \otimes 
    (\sH_{\omega_{\sfH_2}},\pi_{\omega_{\sfH_2}}, \Omega_{\omega_{\sfH_2}},H_{\omega_{\sfH_2}} ).
\]
Thus, $(\sfH , \omega_{\sfH})$ is a gapped UAL Hamiltonian with gap $\delta$.
In \cref{subsection:sheaf.lattice}, we will fix a unit vector $\Omega_{\bm{x}} \in \bC^{n_{\bm{x}}}$ for each $\bm{x} \in \Lambda$, in order to define the trivial Hamiltonian
\begin{align}
    \sfh = (\sfh_{\bm{x}} )_{\bm{x} \in \Lambda} \in \fH_{\Lambda}^{\al}, \quad \sfh_{\bm{x}} \coloneqq 1 - \Omega_{\bm{x}} \otimes \Omega_{\bm{x}}, \quad \omega_0\coloneqq \bigotimes_{\bm{x}} \langle \blank \Omega_{\bm{x}},\Omega_{\bm{x}} \rangle, \label{eqn:trivial.Hamiltonian}
\end{align}
which will play the role of the unit of the above product.

Among the local generators realizing a given UAL derivation, there is a standard choice called the brick decomposition. 
It is also compatible with the notion of interaction, a family of operators indexed by finite subsets of $\Lambda$, used as a local generator of the Hamiltonian in the literature \cites{bratteliOperatorAlgebrasQuantum1997,nachtergaeleQuasilocalityBoundsQuantum2019,moonAutomorphicEquivalenceGapped2020}.
\begin{prp}[{\cite{kapustinLocalNoetherTheorem2022}*{Propositions C.2, D.2}}]\label{prp:brick}
    Let $\iota \colon \Lambda \to \bR^{l_{\Lambda}}$ be weakly uniformly discrete. For $0<\mu<1$ and $\nu\geq 0$, let $\nu_k\coloneqq \nu + k(l_{\Lambda}+1)$ and $\mu_k$ be as in \cref{rmk:polynomial.coarse.invariance}.
    \begin{enumerate}
        \item Let $a \in \cA_{\Lambda}^{\al}$. Its brick decomposition is $a=\sum_{\rho \in \bB} a^\rho $ is given inductively by 
        \begin{align*} a^\rho = \Pi_{B_\rho}(a) - \sum_{\rho \in \bB, \sigma < \rho} a^{\sigma}.  \end{align*}
        Then the infinite sum $a=\sum_{\rho \in \bB }a^\rho $ converges in the almost local topology. 
        \item Let $\iota(\bB)$ denote the subposet of $2^\Lambda$ consisting of $B_\rho$'s. 
        For a UAL derivation $\sfG$, the interaction generating the dynamics of $\sfG$ is a collection of skew-adjoint operators $\{ \Phi_{\sfG}(B_\rho) \}_{B_\rho \in \iota(\bB)}$ defined by 
        \begin{align*} 
        \Phi_\sfG( B_\rho ) \coloneqq  \sum_{\bm{x} \in \Lambda } \sfG_{\bm{x}}^{\rho} \in \cA_{B_\rho }. 
        \end{align*}
        This infinite sum converges in the operator norm topology. Moreover, for any $\nu >0$ and $0<\mu<1$, there is $L_{\nu,\mu}>0$ such that
        \begin{align} 
        \begin{split}
        \vvert \Phi_{\sfG} \vvert_{\nu,\mu} \coloneqq {}&{} \sup_{\bm{x},\bm{y} \in \Lambda} f_{\nu,\mu}(d(\bm{x},\bm{y}))^{-1} \cdot  \sum_{B_\rho \in \iota(\bB), \  B_\rho \ni \bm{x},\bm{y}}
        \| \Phi_\sfG(B_\rho )\| \\
        \leq {}&{} L_{\nu,\mu} \cdot \vvert \sfG \vvert_{\nu_3,\mu_1} <\infty.
        \end{split}\label{eqn:NSY}
        \end{align}
        \item Conversely, if a family of skew-adjoint operators $\Phi (B_\rho) \in \cA_{B_\rho}$ satisfies that the seminorms $\vvert \Phi \vvert_{\nu,\mu} $ in \eqref{eqn:NSY} is finite for any $0<\mu<1$ and $\nu >0$, then the collection of skew-adjoint operators $(\sfG_{\Phi,\bm{x}})_{\bm{x} \in \Lambda}$ given by
        \begin{align*}
        \sfG_{\Phi,\bm{x}} \coloneqq \sum_{B_\rho \ni \bm{x}} |B_\rho|^{-1}\Phi(B_\rho )
        \end{align*}
        forms a UAL derivation $\sfG_{\Phi} \in \fDer ^{\al}_{\Lambda}$ such that $\Phi_{\sfG_{\Phi}} = \Phi$ and $\sfad (\sfG_{\Phi_{\sfG}})=\sfad(\sfG)$.  
    \end{enumerate}
\end{prp}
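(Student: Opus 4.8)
The plan is to follow \cite{kapustinLocalNoetherTheorem2022}*{Propositions C.2 and D.2} essentially line for line; the only adjustment needed is that there $\Lambda$ is $\bZ^d$ (or a Delone set) whereas here it is merely weakly uniformly discrete in $\bR^{l_\Lambda}$. The combinatorial facts about the poset $\bB_{l_\Lambda}$ — the M\"obius bound \eqref{eqn:brick.Mobius} and the identity \eqref{eqn:R(rho.x)} — are unaffected, while every convergence estimate will now be supplied by the polynomial-growth lemmas of \cref{subsection:coarse.lattice}: \cref{lem:sum.exponential}, \cref{prp:Ffunction}, \cref{rmk:polynomial.coarse.invariance}, and \cref{lem:R(rho.x).growth}. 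The first step is to invert the recursion $\Pi_{B_\rho}(a)=\sum_{\sigma\leq\rho}a^\sigma$ by the M\"obius inversion formula, obtaining $a^\rho=\sum_{\sigma\leq\rho}\mu_{\bB}(\sigma,\rho)\,\Pi_{B_\sigma}(a)$; since $\Pi_{B_\sigma}(a)\in\cA_{B_\sigma}\subset\cA_{B_\rho}$ for $\sigma\leq\rho$, this exhibits $a^\rho\in\cA_{B_\rho}$ once convergence is in hand.

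For (1) the heart is the pointwise estimate $\|a^\rho\|\leq 4^{l_\Lambda}f(R(\rho,\bm x))\|a\|_{\bm x,f}$ for all $\bm x\in\Lambda$, $f\in\cF$. To prove it I would fix $\bm x$, set $r\coloneqq R(\rho,\bm x)$ so that there is $\sigma_*<\rho$ with $B_r(\bm x)\cap B_\rho\subset B_{\sigma_*}$, and split $\Pi_{B_\sigma}(a)=\Pi_{B_\sigma}(\Pi_{\bm x,r}(a))+\Pi_{B_\sigma}(a-\Pi_{\bm x,r}(a))$: the M\"obius-weighted sum over $\sigma\leq\rho$ of the first terms is the $\rho$-brick component of the $B_r(\bm x)$-localised operator $\Pi_{\bm x,r}(a)$, and it vanishes because that operator's support meets $B_\rho$ only inside the proper sub-brick $B_{\sigma_*}$ — this is the combinatorial cancellation lemma of \cite{kapustinLocalNoetherTheorem2022}*{Appendix C}, a statement about the product-of-intervals poset alone — while the second contributes at most $\big(\sum_{\sigma\leq\rho}|\mu_{\bB}(\sigma,\rho)|\big)f(r)\|a\|_{\bm x,f}\leq 4^{l_\Lambda}f(r)\|a\|_{\bm x,f}$ by \eqref{eqn:brick.Mobius}. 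Granting this, I would fix a basepoint $\bm x_0$ and deduce absolute norm convergence of $\sum_\rho a^\rho$ from \cref{lem:R(rho.x).growth}(1) (which bounds $\#\{\rho:R(\rho,\bm x_0)\leq r\}$ by $\kappa_{\bB}(1+r)^{2l_\Lambda}$) via \cref{lem:sum.exponential}, taking $f=f_{\nu,\mu}$ with $\nu$ large; for convergence in the Fr\'echet topology I would use $a^\rho\in\cA_{B_\rho}$ together with \eqref{eqn:R(rho.x)} (which gives $B_\rho\not\subset B_s(\bm y)\Rightarrow R(\rho,\bm y)\geq s-1$) to bound the $\|\blank\|_{\bm y,g}$-seminorm of a tail $\sum_{\rho\in T}a^\rho$ by $2\sum_{\rho\in T,\,R(\rho,\bm y)\geq s-1}\|a^\rho\|$, which tends to $0$ as $s\to\infty$ and as $T$ shrinks once $R(\rho,\bm y)$ is compared with $R(\rho,\bm x_0)$. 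Hence the partial sums are Cauchy in every seminorm.

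For (2), applying the estimate of (1) with basepoint $\bm x$ gives $\|\sfG_{\bm x}^\rho\|\leq 4^{l_\Lambda}f(R(\rho,\bm x))\vvert\sfG\vvert_f$; summing over $\bm x$ and using \cref{lem:R(rho.x).growth}(2) and \cref{lem:sum.exponential} (with $\nu>l_\Lambda+1$) shows $\Phi_\sfG(B_\rho)=\sum_{\bm x}\sfG_{\bm x}^\rho$ converges in norm and lies in $\cA_{B_\rho}$. For the key bound \eqref{eqn:NSY} I would fix $\bm x,\bm y$, put $D\coloneqq\rmd(\bm x,\bm y)$, and estimate $\sum_{B_\rho\ni\bm x,\bm y}\|\Phi_\sfG(B_\rho)\|\leq 4^{l_\Lambda}\vvert\sfG\vvert_{\nu_3,\mu_1}\sum_{\bm z\in\Lambda}\sum_{B_\rho\ni\bm x,\bm y}f_{\nu_3,\mu_1}(R(\rho,\bm z))$; because $\bm x,\bm y\in B_\rho$, \eqref{eqn:R(rho.x)} forces $R(\rho,\bm z)\geq\max(\tfrac12 D,\rmd(\bm z,\bm x),\rmd(\bm z,\bm y))-1$, so the inner sum is controlled by \cref{lem:R(rho.x).growth}(1) and \cref{lem:sum.exponential} (growth exponent $2l_\Lambda$; note $\nu_3-2l_\Lambda-1=\nu+l_\Lambda+2>l_\Lambda+1$, the margin kept back for the outer sum), and splitting the resulting polynomial and exponential factors — using concavity of $t\mapsto t^{\mu_1}$ and the shift/rescale constants of \cref{rmk:polynomial.coarse.invariance} — leaves $f_{\nu,\mu}(D)$ times a $\bm z$-summable function, after which \cref{prp:Ffunction} finishes and $L_{\nu,\mu}$ collects the constants. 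I expect this step to be the lengthiest: the three nested summations (over $\bm z$ inside $\Phi_\sfG(B_\rho)$, over $\rho\ni\bm x,\bm y$, and over $\bm z$ again) consume exactly the loss from $(\nu,\mu)$ to $(\nu_3,\mu_1)$.

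For (3), $\sfG_\Phi\in\fDer^{\al}_\Lambda$ follows by running this estimate backwards: setting $\bm y=\bm x$ in the hypothesis $\sum_{B_\rho\ni\bm x,\bm y}\|\Phi(B_\rho)\|\leq\vvert\Phi\vvert_{\nu,\mu}f_{\nu,\mu}(\rmd(\bm x,\bm y))$ shows $\sfG_{\Phi,\bm x}=\sum_{B_\rho\ni\bm x}|B_\rho|^{-1}\Phi(B_\rho)$ converges in norm, and since $\Phi(B_\rho)\in\cA_{B_\rho}$ only bricks with $B_\rho\not\subset B_r(\bm x)$ — each containing a lattice point at distance $\geq r$ from $\bm x$ — contribute to $\sfG_{\Phi,\bm x}-\Pi_{\bm x,r}(\sfG_{\Phi,\bm x})$; overcounting each such brick by that far point bounds $\|\sfG_{\Phi,\bm x}-\Pi_{\bm x,r}(\sfG_{\Phi,\bm x})\|$ by $2\sum_{\bm y:\,\rmd(\bm x,\bm y)\geq r}\vvert\Phi\vvert_{\nu,\mu}f_{\nu,\mu}(\rmd(\bm x,\bm y))$, which is $\leq\mathrm{const}\cdot f(r)$ uniformly in $\bm x$ by \cref{lem:sum.exponential}, so $\vvert\sfG_\Phi\vvert_f<\infty$. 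The identities $\Phi_{\sfG_\Phi}=\Phi$ and $\sfad(\sfG_{\Phi_\sfG})=\sfad(\sfG)$ I would verify as in \cite{kapustinLocalNoetherTheorem2022}*{Proposition D.2} by expanding $\Phi_{\sfG_\Phi}(B_\rho)=\sum_{\bm x}(\sfG_{\Phi,\bm x})^\rho$, using linearity of the brick decomposition, the normalisation $\sum_{\bm x\in B_\rho}|B_\rho|^{-1}=1$, and the same M\"obius telescoping as in (1). The one genuinely delicate ingredient, used in both (1) and (3), is that combinatorial cancellation lemma for $\bB_{l_\Lambda}$; once it is available, the rest is bookkeeping of polynomial-growth sums, the bulkiest instance being \eqref{eqn:NSY}.
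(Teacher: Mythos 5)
Your plan matches the paper's proof essentially line for line: M\"obius inversion together with the observation $(\Pi_{\bm{x},R(\rho,\bm{x})}(a))^\rho=0$ gives the pointwise estimate $\|a^\rho\|\leq 4^{l_\Lambda}f(R(\rho,\bm{x}))\|a\|_{\bm{x},f}$; the three growth lemmas of \cref{subsection:coarse.lattice} then supply all convergence. The only structural deviation is in (2): the paper first sums over $\bm{x}$ to obtain the one-variable bound $\|\Phi_{\sfG}(B_\rho)\|\lesssim f_{\nu_2,\mu_1}(\inf_{\bm{z}}R(\rho,\bm{z}))$ (this is \eqref{eqn:brick.decomposition}), then sums over $\rho$ via \cref{lem:R(rho.x).growth}~(3) and \eqref{eqn:large.scale.Lipschitz.F}; you instead interchange and sum over $(\bm{z},\rho)$, which works but forces you to split the single factor $f_{\nu,\mu_1}(\max(D/2,\rmd(\bm{z},\bm{x}))-1)$ into a $D$-dependent piece times a $\bm{z}$-summable piece. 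That split is more delicate than your sketch suggests, and the paper's ordering avoids it entirely by passing through $\inf_{\bm{z}}R(\rho,\bm{z})\geq\diam(\rho)/(2\sqrt{l_\Lambda})-1$.

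There is one point in (3) you should reconsider. The paper does \emph{not} actually verify the identity $\Phi_{\sfG_\Phi}=\Phi$; its proof of (3) establishes only the almost-local bound \eqref{eqn:interaction.to.local} and the equality of derivations $\sfad(\sfG_{\Phi_\sfG})=\sfad(\sfG)$, the latter by showing absolute convergence of $\sum_{\rho}\sum_{\bm{y}}\|[\sfG_{\bm{y}}^\rho,a]\|$ so that the double sum can be reordered. If you try to prove $\Phi_{\sfG_\Phi}=\Phi$ literally by the telescoping you describe, you will find it is false: interchanging sums and using $\sum_{\bm{x}\in B_\sigma}|B_\sigma|^{-1}=1$ gives $\Phi_{\sfG_\Phi}(B_\rho)=\sum_{\sigma}(\Phi(B_\sigma))^\rho$, and the brick decomposition of $\Phi(B_\sigma)$ strips off its trace/scalar part into $(\Phi(B_\sigma))^{\emptyset}$. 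Concretely, if $\Phi(B_\rho)=a\in\cA_{B_\rho}$ with $\mathrm{tr}(a)\neq 0$ and all other $\Phi(B_\sigma)=0$, then $\Phi_{\sfG_\Phi}(B_\rho)=a-\mathrm{tr}(a)\cdot 1\neq a$. The equality only holds modulo scalar terms (equivalently, at the level of the generated derivation, or under the extra normalisation that each $\Phi(B_\sigma)$ already lies in the orthocomplement of $\mathrm{span}_{\tau<\sigma}\cA_{B_\tau}$). You should either restrict to normalised interactions, state the identity modulo scalars, or, like the paper, content yourself with $\sfad(\sfG_{\Phi_\sfG})=\sfad(\sfG)$.
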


For a finite subset $Z \subset \Lambda$, we can identify the finite dimensional algebra $\cA_Z$ with its GNS Hilbert space $L^2(\cA_{Z},\mathrm{tr})$ via the unique normalized trace. 
Under this identification, the conditional expectation $\Pi_Y$ is identified with the orthogonal projection onto $L^2(\cA_Y,\mathrm{tr})$, and the above $a^\rho$ is the projection onto the orthogonal complement of $\mathrm{span}_{\sigma <\rho }\cA_{B_\sigma}$. 
Hence the brick decomposition $a= \sum_{\rho}a^\rho$ is consistent with the one given in \cite{kapustinLocalNoetherTheorem2022}*{Subsection 3.2}.
In particular, $a^\rho =0$ if there is $\sigma \in \bB$ such that $\sigma <\rho$ and $B_\sigma=B_\rho$, and hence the brick decomposition is actually indexed by $\iota(\bB)$ rather than $\bB$. 

\begin{proof}
First, by $\Pi_{B_\rho}(a) = \sum_{\sigma \leq \rho} a^{\sigma}$, we have
\begin{align*}
    a= \lim_{\sigma \in \bB } \Pi_{B_\sigma}(a) = \lim_{\sigma \in \bB} \sum_{\rho \leq \sigma}a^{\rho} = \sum_{\rho \in \bB} a^\rho
\end{align*}
if the infinite sum in the right hand side converges. 
To see the convergence of the right hand side, let $\mu_{\bB}$ and $R(\rho,\bm{x}) $ be as in page \pageref{subsubsection:brick}. 
Then we have $a^{\rho}=\sum_{\sigma \leq \rho} \mu(\sigma,\rho)\Pi_{B_\sigma}(a)$ and $(\Pi_{\bm{x},R(\rho,\bm{x})}(a))^{\rho} =0$. Hence, by \eqref{eqn:brick.Mobius}, 
\begin{align}
\begin{split}
    \| a^{\rho}\| = {}& \| (a-\Pi_{\bm{x},R(\rho,\bm{x})}(a))^\rho \|
    \leq{} \sum_{\sigma <\rho} \mu_{\bB}(\sigma,\rho) \cdot \| \Pi_{B_\sigma}(a - \Pi_{\bm{x},R(\rho,\bm{x})}(a)) \| \\
    \leq {}& 4^{l_{\Lambda}} \cdot f(R(\rho,\bm{x})) \cdot \| a \|_{\bm{x},f}.
\end{split}
\label{eqn:brick.estimate}
\end{align}
Since $\Pi_{\bm{x},r}(a^{\rho})=0$ if $r\leq R(\rho,\bm{x})$ and $\Pi_{\bm{x},r}(a^{\rho}) = a^{\rho}$ if $r> R(\rho,\bm{x})$, we have 
\begin{align}
    \| a^\rho\|_{\bm{x},\nu,\mu}=f_{\nu,\mu}(R(\rho,\bm{x}))^{-1} \cdot \| a^\rho \| \leq 4^{l_{\Lambda}} \cdot (1+R(\rho,\bm{x}))^{-2l_{\Lambda} -2} \cdot \| a \|_{\bm{x},\nu_2,\mu}.\label{eqn:brick.estimate.F}
\end{align}
By \cref{lem:sum.exponential} and \cref{lem:R(rho.x).growth} (1), the sum of the right hand side is finite, which shows (1). 

The convergence of the right hand side of the definition of $\Phi_{\sfG}(B_\rho)$ follows from \cref{lem:R(rho.x).growth} (2) as 
\begin{align}
    \begin{split}
        \|\Phi_{\sfG}(B_{\rho})\| \leq \sum_{\bm{x} \in \Lambda }\| \sfG_{\bm{x}}^\rho\| \leq {}&{} \sum_{\bm{x} \in \Lambda} 4^{l_{\Lambda}} \cdot f_{\nu_3,\mu_1}(R(\rho,\bm{x}))\cdot \| \sfG_{\bm{x}}\|_{\bm{x},\nu_3,\mu_1} \\
        \leq {}&{} 2^{3l_{\Lambda} +1} \kappa_{\Lambda}  \cdot f_{\nu_2,\mu_1}  \big( \inf_{\bm{z} \in \Lambda} R(\rho,\bm{z}) \big)\cdot \vvert \sfG \vvert_{\nu_3,\mu_1 }
    \end{split}\label{eqn:brick.decomposition}
\end{align}
for any $\mu$, $\nu$. By the remark below \eqref{eqn:R(rho.x)}, for $\rho \in \bB$ such that $B_\rho \ni \bm{x}, \bm{y}$, we have
\begin{align*}
    \inf_{\bm{z} \in \Lambda}R(\rho,\bm{z}) \geq \frac{\diam(\rho)}{2\sqrt{l_{\Lambda}}} -1 \geq \frac{\rmd(\bm{x},\bm{y})}{2\sqrt{l_{\Lambda}}}-1.
\end{align*}
Thus, by \cref{lem:sum.exponential} and \cref{lem:R(rho.x).growth} (3), we have
\begin{align*}
    \sum_{B_\rho \in \iota(\bB), \ B_\rho \ni \bm{x},\bm{y}}  f_{\nu_2,\mu_1}\big(\inf_{\bm{z}} R(\rho,\bm{z})\big) \leq  2^{2l_{\Lambda} +1}  \kappa_{\bB} \cdot f_{\nu,\mu_1}\Big( \frac{\rmd(\bm{x},\bm{y})}{2\sqrt{l_{\Lambda}}} -1\Big).
\end{align*}
This, together with \eqref{eqn:large.scale.Lipschitz.F}, shows \eqref{eqn:NSY} by choosing the constant $L_{\nu,\mu}$ as 
\begin{align*}
    L_{\nu,\mu} \coloneqq 2^{5l_{\Lambda}+2} \kappa_{\Lambda}\kappa_{\bB} \cdot c_{0,\mu,2\sqrt{l_{\Lambda}}} \cdot f_{\mu}(2\sqrt{l_{\Lambda}}) \cdot (1+2\sqrt{l_{\Lambda}})^\nu <\infty. 
\end{align*}

Finally, we show (3). The well-definedness of $\sfG_{\Phi}$ is verified  as
\begin{align}
\begin{split}
    \| \sfG_{\Phi,\bm{x}} \|_{\bm{x},\nu,\mu} \leq {}&{}  \sup_{r>0}\bigg( f_{\nu,\mu}(r)^{-1} \cdot \sum_{\substack{B_\rho \in \iota(\bB), B_\rho\ni \bm{x} \\ B_\rho \not \subset B_r(\bm{x})}} \| \Phi (B_\rho)\| \bigg)\\
    \leq {}&{} \sup_{r>0}\bigg( f_{\nu,\mu}(r)^{-1} \cdot \sum_{\substack{\bm{y} \in \Lambda \\ \rmd (\bm{x},\bm{y})\geq r}} \sum_{B_\rho \ni \bm{x}, \bm{y}} \| \Phi(B_\rho)\| \bigg)\\
    \leq {}&{}  \sup_{r>0} \bigg( f_{\nu,\mu}(r)^{-1} \cdot\sum_{\substack{\bm{y} \in \Lambda \\ \rmd (\bm{x},\bm{y}) \geq r}} \vvert \Phi \vvert_{\nu_1,\mu} \cdot f_{\nu_1,\mu}(\rmd(\bm{x},\bm{y}))\bigg) \\
    \leq {}&{} 2^{l_{\Lambda}+1}\kappa_{\Lambda} \cdot \vvert \Phi \vvert_{\nu_1,\mu}  <\infty. 
\end{split}\label{eqn:interaction.to.local}
\end{align}
Here, \eqref{eqn:large.scale.Lipschitz.F} is used in the last inequality.
Moreover, by \eqref{eqn:derivation.welldef} and \eqref{eqn:brick.estimate.F}, we obtain that
\begin{align*}
    {}&{} \sum_{B_\rho \in \iota(\bB)} \sum_{\bm{y} \in \Lambda} \| [\sfG_{\bm{y}}^\rho,a] \| \leq  \sum_{B_\rho \in \iota(\bB)} \sum_{\bm{y} \in \Lambda} 2f_{\nu,\mu}(\rmd(\bm{x},\bm{y})/2) \cdot \| \sfG_{\bm{y}}^\rho \|_{\bm{y},\nu,\mu} \cdot \| a \|_{\bm{x},\nu,\mu} 
    \\
    \leq {}&{} 
    \sum_{B_\rho \in \iota(\bB)} \sum_{\bm{y} \in \Lambda} 2f_{\nu,\mu}(\rmd(\bm{x},\bm{y})/2) \cdot 4^{l_{\Lambda}}(1 + R(\rho, \bm{y}))^{-2l_{\Lambda} -2} \cdot \vvert \sfG \vvert_{\nu_2,\mu} \cdot \| a \|_{\bm{x},\nu,\mu} < \infty .
\end{align*}
It implies that the two local generators $\sfG$ and $\sfG_{\Phi_{\sfG}}$ defines the same UAL derivation on the common core $\cA_{\Lambda}^{\al}$.
\end{proof}

\subsection{Smooth family of Hamiltonians and the adiabatic theorem}\label{subsection:automorphic.equivalence}

\begin{defn}\label{defn:smooth}
Let $\sM$ be an $n$-dimensional manifold and let $\sfG \colon \sM \to \fDer ^{\al}_{\Lambda}$ be a map. 
We say that $\sfG$ is a \emph{smooth} family if the following conditions hold. 
\begin{enumerate}
    \item For each $\bm{x} \in {\Lambda}$, the assignment $p \mapsto \sfG_{\bm{x}}(p) \in \cA_{\Lambda}^{\al}$ is a $C^\infty$-map with respect to the almost local topology on $\cA_{\Lambda}^\al$. 
    \item For $k \in \bN$ and a relatively compact local chart $(\sU \, ; x_1,\cdots,x_n)$ of $\sM$, we have
    \begin{align*}
         \vvert \sfG \vvert _{\sU,C^k,f } &{}\coloneqq \sup_{\bm{x} \in \Lambda} \| \sfG_{\bm{x}}\|_{\sU,C^k,\bm{x},f} =  \sup _{p \in \sU } \sup_{\bm{x} \in {\Lambda}} \sup_{|I| \leq k} \| \partial^I \sfG_{\bm{x}} (p) \|_{{\bm{x}},f } <\infty.
    \end{align*} 
\end{enumerate}
We say that a family of gapped UAL Hamiltonians $\sfH \colon \sM \to \fH_{\Lambda}^{\al}$ is \emph{smooth} if $i\sfH$ satisfies the above (1), (2), and also the following condition (3).
\begin{enumerate}
    \item[(3)] For any $a \in \cA_{\Lambda}^{\al}$, the map $\omega_{\sfH}(a) \colon \sM \to \bC$ given by $p \mapsto \omega _{\sfH(p)} (a)$ is smooth. Moreover, for any relatively compact local chart $\sU $ of $\sM$, and $k \in \bN$, there is $f \in \cF$ such that
    \begin{align}
    \| \omega\|_{\sU,C^k,f} \coloneqq \sup_{\bm{x} \in \Lambda}\sup_{a \in \cA_{\Lambda}^{\al} \setminus \{ 0\} }\sup_{p \in \sU} \frac{\| \omega_{\sfH} (a) \|_{\sU,C^k}}{ \|a \|_{\bm{x},f}} <\infty. \label{eqn:differentiable}
    \end{align}
\end{enumerate}
\end{defn}
In contrast to \cref{rmk:Cstar}, \eqref{eqn:differentiable} does not imply that $\omega_{\sfH}$ is not smooth as a function taking value in the space of C*-states $\fS(\cA_{\Lambda})$.

By definition, smoothness is inherited by the precomposition with a smooth map $f \colon \sN \to \sM$. 
Indeed, the condition (2) is satisfied if $\vvert \sfG \vvert _{\sU_i,C^k,f}<\infty$ is verified only for each $\sU_j$ of an open cover $\{ \sU_j\}$ by relatively compact local charts.

\begin{rmk}\label{rmk:MO}
Let $c \colon (-\varepsilon, \varepsilon) \to \sM$ be an arbitrary smooth path. By the Banach space valued version of the Taylor theorem (e.g.~\cite{zeidlerAppliedFunctionalAnalysis1995}*{Theorem 4.C}), the condition (2) of \cref{defn:smooth} and \eqref{eqn:brick.decomposition} implies that
    \begin{align}
    \sup_{B_{\rho} \in \iota(\bB)} \sup_{-\varepsilon < t < \varepsilon} \sup_{\bm{x} \in {\Lambda}} \bigg\| \frac{\Phi_{\sfG(c(t))}(B_{\rho}) - \Phi_{\sfG(c(0)) }(B_\rho)}{t} - \frac{d}{dt}\bigg|_{t=0} \Phi_{\sfG(c(t))}(B_\rho )  \bigg\| \to 0 \quad \text{as $\varepsilon \to 0$. } \label{eqn:Banach.Taylor}
    \end{align}
Moreover, by \eqref{eqn:brick.decomposition} we have
    \begin{align*} 
    \| \partial_t \Phi_{\sfG}(B_\rho)(c(t))\| \leq \sum_{\bm{x} \in \Lambda} \|\partial_{t} \sfG_{\bm{x}}^\rho (c(t))\| \leq 2^{3l_{\Lambda} +1} \kappa_{\Lambda}  \cdot f_{\nu_2,\mu_1}  \Big(\frac{\mathrm{diam}(\rho)}{2\sqrt{l_{\Lambda}}}-1 \Big) \cdot \vvert \sfG \vvert_{\sU,C^1,\nu_3,\mu_1}.
    \end{align*}
They cover the assumptions (iii) and (iv) of \cite{moonAutomorphicEquivalenceGapped2020}*{Assumption 1.2}.  The other assumptions, except for (ii), are satisfied by definition. 
\end{rmk}

Let $\Aut(\cA_{\Lambda}^{\al}) \subset \cB(\cA_{\Lambda}^{\al})$ denote the set of continuous $\ast$-automorphisms on $\cA_{\Lambda}^{\al}$. 
\begin{defn}[{\cites{kapustinLocalNoetherTheorem2022,artymowiczQuantizationHigherBerry2023}}]\label{defn:parallel.transport}
Let $\sM$ be an $n$-dimensional smooth manifold. 
\begin{enumerate}
    \item A smooth \emph{connection $1$-form} of the trivial bundle $\sM \times \fDer ^{\al}_{\Lambda}$ is a section $\sfG$ of the vector bundle $T^*\sM \otimes \fDer ^{\al}_{\Lambda}$ such that, for any local chart $(\sU \, ;x_1,\cdots,x_n)$, its local presentation $\sfG=\sum_{i=1}^n \sfG_i \  dx_i$ satisfies that  each $\sfG_i$ is smooth. 
    \item Let $\alpha \colon \sM \to \Aut(\cA_{\Lambda}^{\al})$ be a map such that $\alpha_{\bullet }(a)$ is Fr\'{e}chet differentiable with respect to the almost local topology for any $a \in \cA_{\Lambda}^{\al}$. Its Maurer--Cartan form $\theta(\alpha)$ is a smooth $1$-form taking value in the space of continuous derivations on $\cA_{\Lambda}^{\al}$;
    \begin{align*} 
    \theta(\alpha) \coloneqq \alpha^{-1}d\alpha = \bigg( a \mapsto \sum_{i=1}^d \alpha_{\bullet }^{-1} \Big(\frac{\partial }{\partial x_i}(\alpha_{\bullet}(a)) \Big)dx_i \bigg) \in \Omega^1(\sM, \mathrm{Der}(\cA_{\Lambda}^{\al })).
    \end{align*}
    \item For a smooth connection $1$-form $\sfG \in \Omega^1(\sM,\fDer ^{\al}_{\Lambda})$ and a smooth path $c \colon [s_0,t_0] \to \sM$, a map 
    \begin{align*} 
     \alpha_c(\sfG \, ; \blank, \blank) \colon  \{ (s,t) \in \bR^2 \mid s_0 \leq s, t \leq t_0 \} \to \Aut(\cA_{\Lambda}^{\mathrm{al}})
    \end{align*}
    is said to be the parallel transport along $c$ if it satisfies $\alpha_c(\sfG \, ; s,s)=\id$ for any $s_0 \leq s \leq t_0$ and 
    \[
    \theta(\alpha_c (\sfG \,; \blank ,t) ) = -\sfad (c^*\sfG) \in \Omega^1([s_0,t],\mathrm{Der} (\cA ^{\al}_{\Lambda})).
    \]
    Equivalently, for any $a \in \cA_{\Lambda}^{\al}$, we have
    \begin{align} 
    \frac{d}{ds} \alpha_c(\sfG \, ; s,t) (a) + \alpha_c(\sfG \, ; s,t) \Big( \Big[\iota_{\frac{d}{ds}}\sfG (c(s)) , a\Big]\Big) =0. 
    \label{eqn:parallel.transport.ODE}
    \end{align}  
    In the case that the underlying manifold is $[s_0,t_0]$ and $c(t)=t$ (resp.\ $\sM \times [s_0,t_0]$ and $c(t) = (p,t)$), we write the parallel transport as $\alpha( \sfG  \,; s, t)$ (resp.\ $\alpha_p(\sfG \,;s, t)$). When $s=0$, we abbreviate $\alpha_c(\sfG\,;0, t)$ as $\alpha_c(\sfG\,; t)$.
    \item A $\ast$-automorphism $\alpha \in \Aut(\cA_{\Lambda})$ is said to be \emph{locally generated} (LG) if there is a smooth connection $\sfG \in \Omega^1 ([0,1],\fDer ^{\al}_{\Lambda})$ such that $\alpha = \alpha (\sfG \, ;1)$. 
    We write $\cG_{\Lambda}$ for the subgroup of LG automorphisms on $\cA_{\Lambda}$. 
    A map $\alpha \colon \sM \to \cG_{\Lambda}$ is said to be \emph{smooth} if, for any $p \in \sM$ there is an open neighborhood $p \in \sU$ and $\sfG \in \Omega^1 (\sU \times [0,1], \fDer ^{\al}_{\Lambda})$ such that $\alpha_{p'} = \alpha_{p'}(\sfG  \, ; 1)$ for any $m' \in \sU$.
    \item For a state $\omega \colon \cA_{\Lambda} \to \bC$, let $\cG_{\Lambda}^\omega$ denote the subgroup of $\cG_{\Lambda}$ consisting of LG automorphisms with $\omega \circ \alpha = \omega $. The smoothness of a map $\sM \to \cG_{\Lambda}^\omega $ is defined similarly. 
\end{enumerate}
\end{defn}
For example, the time evolution is written by the above notation as $\tau_{\sfH,t}\coloneqq \alpha(-i\sfH dt\,; t)$ for $t>0$. 

The estimate \eqref{eqn:NSY} means that the brick decomposition $\{ \Phi_{\sfG}(B_\rho)\}_{B_\rho \in \iota(\bB)}$ satisfies the quasi-locality assumption of \cite{nachtergaeleQuasilocalityBoundsQuantum2019}*{Subsection 3.1}. 
Therefore, analytic results concerning the Lieb--Robinson bound given in  \cite{nachtergaeleQuasilocalityBoundsQuantum2019} apply to our UAL derivations. 
\begin{rmk}\label{rmk:parallel.transport.existence.uniqueness}
We list known facts and immediate observations concerning the existence and uniqueness of parallel transport.
Let $\sfG \in C^\infty([0,1],\fD_{\Lambda}^{\al})$. 
\begin{enumerate}
    \item A solution of the ODE \eqref{eqn:parallel.transport.ODE} on $\cA_{\Lambda}^{\al}$ satisfy $\big( \frac{d}{ds} \alpha(\sfG dt \,; s,t)^{-1}\big) \circ \alpha(\sfG dt\,; s,t)(a) = [\sfG(s) ,a]$, in other words, 
    \[
    \frac{d}{ds}\alpha(\sfG dt \,; s,t)^{-1} = \sfad \big(\sfG(s) \big) \circ \alpha(\sfG dt \,; s,t)^{-1}.
    \]
    This is verified by differentiating $\alpha(\sfG dt \,; t,s)^{-1} \circ \alpha(\sfG dt \,; t,s)(a) = a$ by the variable $s$. 
    \item By (1), the ODE \eqref{eqn:parallel.transport.ODE} on $\cA_{\Lambda}^{\al}$ uniquely determines the evolution $\alpha(\sfG dt\,; \blank,\blank)$ on $\cA_{\Lambda}^{\al}$ if any. Indeed, if two evolutions $\alpha^{(0)}(\sfG dt \,; \blank,\blank)$ and $\alpha^{(1)}(\sfG dt \,; \blank,\blank)$ satisfy \eqref{eqn:parallel.transport.ODE}, then 
    \begin{align*}
    {}&{}\frac{d}{ds} \alpha^{(1)}(\sfG dt\,; s,t) \circ \alpha^{(0)}(\sfG dt \,; s,t)^{-1} \\
    ={}&{} \alpha^{(1)}(\sfG dt\,; s,t) \circ (\sfad (\sfG(s)) - \sfad(\sfG(s))) \circ  \alpha^{(0)}(\sfG dt \,; s,t)^{-1}=  0. 
    \end{align*}
    In particular, we obtain $\alpha(\sfG dt\,; t,u) \circ \alpha(\sfG dt\,; s,t) = \alpha(\sfG dt\,; s,u)$ for any $s \leq t \leq u$ or $s \geq t \geq u$. 
   \item By (1) and (2), the solution of \eqref{eqn:parallel.transport.ODE} satisfies 
    \[
    \frac{d}{dt}\alpha(\sfG dt \,; s,t)(a) = \big[ \sfG(t), \alpha(\sfG \,; s,t)(a) \big],
    \]
    which means that $\alpha(\sfG \,; t,s) = \alpha(\sfG\,; s,t)^{-1}$. 
    This is verified by differentiating $\alpha(\sfG dt \,; t,u) \circ \alpha(\sfG dt \,; s,t) = \alpha(\sfG dt \,; s,u)$ by the variable $t$.  
    \item As is remarked above, Subsection 3.1 (especially in Theorems 3.5, 3.9) of \cite{nachtergaeleQuasilocalityBoundsQuantum2019} is applicable to our setting. In conclusion, there is a $1$-parameter family of $\ast$-automorphisms of the C*-algebra $\cA_{\Lambda}$ that satisfies \eqref{eqn:parallel.transport.ODE} for any $a \in \cA_{\Lambda}^{\al}$. Note that \eqref{eqn:parallel.transport.ODE} makes sense even if $\alpha(\sfG \,; s,t) $ may not preserve the subalgebra $\cA_{\Lambda}^{\al}$.
\end{enumerate}
\end{rmk}

Moreover, the solution of \eqref{eqn:parallel.transport.ODE} constructed in \cite{nachtergaeleQuasilocalityBoundsQuantum2019} satisfies the following Lieb--Robinson bound (\cite{nachtergaeleQuasilocalityBoundsQuantum2019}*{Corollary 3.6 (i), (iii)}): For $\nu > l_{\Lambda}+1$, $0<\mu<1$, $X, Y \subset \Lambda$ such that $\# X <\infty$ and $X \cap Y =\emptyset$, $a \in \cA_{X}$, and $b \in \cA_Y$, we have
        \begin{align} 
            \| [\alpha(\sfG \, ;t)(a),b]\| 
            \leq {}&{} 2C_\nu^{-1} \cdot  D_{\nu,\mu}(X,Y) \cdot  (\exp (2t I_{\sfG,\nu,\mu}) -1)  \cdot \|a\| \cdot \| b\|,
            \label{eqn:Lieb.Robinson}
            \\
    \big\| \alpha( \sfG \,; t)(a)-\alpha(\Pi_{Y^c} (\sfG)\,; t) (a)  \big\| 
    \leq {}&{}  2C_\nu^{-1} \cdot tI_{\sfG,\nu,\mu} \exp ( 2t I_{\sfG,\nu,\mu} )  \cdot D_{\nu,\mu}(X,Y) \cdot \| a\|, \label{eqn:Lieb.Robinson.truncate}
        \end{align}
        where
        \begin{align*} 
        I_{\sfG,\nu,\mu} \coloneqq{}&{} C_\nu \vvert \Phi_{\sfG}\vvert_{\nu,\mu} \leq C_{\nu}L_{\nu,\mu} \vvert \sfG \vvert_{\nu_3,\mu_1}, \\
        D_{\nu,\mu} (X,Y) \coloneqq {}&{} \sum_{\bm{x} \in X} \sum_{\bm{y} \in Y}f_{\nu,\mu}(\rmd(\bm{x},\bm{y}))\\
        \leq {}&{}  2^{l_{\Lambda}+1}\kappa_{\Lambda}  \cdot \min \{ \# X ,\# Y\} \cdot f_{\nu-l_{\Lambda}-1,\mu}(\mathrm{dist}(X,Y)) <\infty.
        \end{align*}
    Hereafter, we abbreviate $C_{\nu}^{-1}$ since we may assume that the constant $C_{\nu} $ in \cref{prp:Ffunction} is not less than $1$ without loss of generality. 
By using them, in the following \cref{prp:Lieb.Robinson} (2), we prove that the solution by \cite{nachtergaeleQuasilocalityBoundsQuantum2019} indeed preserves the subalgebra $\cA_{\Lambda}^{\al}$, and hence is the unique parallel transport in the sense of \cref{defn:parallel.transport} (3).

\begin{prp}\label{prp:Lieb.Robinson}
    Let $0<\mu <1$ and $\nu >l_{\Lambda}+1$. For $k \in \bN$, let $\mu_k$ be as in \cref{rmk:seminorms} and $\nu_k \coloneqq \nu + k(l_{\Lambda}+1)$. 
    For $f,g \colon \bR_{\geq 0} \to \bR_{>0}$, we write $f \prec g$ if $\max \{ 1,f/g\}$ is integrable.  
    The following hold. 
\begin{enumerate}
    \item Let $\sfG \in \Omega^1([0,1],\fDer ^{\al}_{\Lambda})$. For $X \subset Y \subset \Lambda$ such that $ \# X <\infty$ and $a \in \cA_X^{\al}$, we have
    \begin{align*}
    \| \alpha (\sfG\,; t)(a) - \Pi_Y(\alpha(\sfG\,; t)(a)) \| \leq  2 D_{\nu,\mu}(X,Y^c) \cdot (\exp(2 t C_\nu L_{\nu,\mu} \vvert \sfG \vvert_{\nu_3,\mu_1} ) -1) \cdot \| a\|. 
    \end{align*}
    \item Let $\sfG \in \Omega^1([0,1],\fDer ^{\al}_{\Lambda})$. There is a positive $\bR$-valued function $\Upsilon_{\sfG,\nu,\mu}(t)$ on $\bR_{\geq 0}$ such that  $\Upsilon_{\sfG,\nu,\mu}(t) \prec \exp (t^{{\mu_\star}})$ for any $\mu/\mu_1 <{\mu_\star} <1$ and
            \begin{align*} 
            \| \alpha(\sfG \, ;t)(a) \|_{\bm{x},\nu,\mu } \leq \Upsilon_{\sfG , \nu,\mu}(t) \cdot  \| a \|_{\bm{x},\nu, \mu_1}
            \end{align*} 
    for any $a \in \cA_{{\Lambda}}^{\al}$.
    \item Let $\sfG, \sfG' \in \Omega^1([0,1],\fDer ^{\al}_{\Lambda})$. There is an $\bR_{> 0}$-valued function $\Upsilon^{\mathrm{rel}}_{\sfG,\sfG',\nu,\mu} (t)$ on $\bR_{\geq 0}$ such that $\Upsilon^{\mathrm{rel}}_{\sfG,\sfG',\nu,\mu}(t) \prec \exp(-t^{\mu_\star})$ for any $\mu_3/\mu_4 <{\mu_\star} < 1$ and
    \begin{align*}
    \| \alpha(\sfG  \, ;t)(a) - \alpha(\sfG ' \, ;t)(a)\|_{\bm{x},\nu,\mu} \leq \Upsilon^{\mathrm{rel}} _{\sfG,\sfG',\nu,\mu}(t) \cdot \vvert \sfG - \sfG'\vvert_{\nu_1,\mu_3}  \cdot \| a\|_{\bm{x},\nu_1,\mu_4} 
    \end{align*}
    for any $a \in \cA_{\Lambda}^{\al}$.
\end{enumerate}
\end{prp}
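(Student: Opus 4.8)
The plan is to derive all three parts from the Lieb--Robinson bounds \eqref{eqn:Lieb.Robinson}, \eqref{eqn:Lieb.Robinson.truncate}, the conditional-expectation formula of \cref{rmk:CE.average}, and the summation estimates \cref{lem:sum.exponential}, \cref{prp:Ffunction}. For part (1), \cref{rmk:CE.average} gives $\|\alpha(\sfG\,;t)(a)-\Pi_{Y}(\alpha(\sfG\,;t)(a))\|\le\sup_{v\in\cU(\cA_{Y^{c}})}\|v\,\alpha(\sfG\,;t)(a)\,v^{*}-\alpha(\sfG\,;t)(a)\|=\sup_{v\in\cU(\cA_{Y^{c}})}\|[\alpha(\sfG\,;t)(a),v]\|$, and since $X\cap Y^{c}=\emptyset$, $\#X<\infty$ and $\|v\|=1$, applying \eqref{eqn:Lieb.Robinson} to the pair $(X,Y^{c})$ (using $C_{\nu}\ge1$) yields the asserted inequality, the constant $I_{\sfG,\nu,\mu}$ being bounded as in \eqref{eqn:NSY}.

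The core of the argument is part (2): when \eqref{eqn:Lieb.Robinson} is read with the larger stretched-exponential parameter $\mu_{1}$ it carries a super-linear light cone of radius $\ell(t):=\max\{1,(4C_{\nu}L_{\nu,\mu_{1}}\vvert\sfG\vvert_{\nu_{3},\mu_{2}}\,t)^{1/\mu_{1}}\}$. I would decompose $a=\sum_{k\ge0}a_{k}$ with $a_{0}=\Pi_{\bm{x},1}(a)$, $a_{k}=\Pi_{\bm{x},k+1}(a)-\Pi_{\bm{x},k}(a)\in\cA_{B_{k+1}(\bm{x})}$, so that $\|a_{k}\|\le2f_{\nu,\mu_{1}}(k)\|a\|_{\bm{x},\nu,\mu_{1}}$, and for a fixed $r$ split $\sum_{k}\|\alpha(\sfG\,;t)(a_{k})-\Pi_{\bm{x},r}(\alpha(\sfG\,;t)(a_{k}))\|$ at $k=r-1-\ell(t)$. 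For $k\le r-1-\ell(t)$ one applies part (1) with $(X,Y)=(B_{k+1}(\bm{x}),B_{r}(\bm{x}))$ and parameters $(\nu,\mu_{1})$; because $(r-k-1)^{\mu_{1}}\ge\ell(t)^{\mu_{1}}\ge4tI_{\sfG,\nu,\mu_{1}}$ the factor $\exp(2tI_{\sfG,\nu,\mu_{1}})-1$ is absorbed into $\exp(-\tfrac12(r-k-1)^{\mu_{1}})$, and then the subadditivity $(r-k-1)^{\mu_{1}}+k^{\mu_{1}}\ge(r-1)^{\mu_{1}}$ together with $\sum_{k}(1+k)^{l_{\Lambda}}\|a_{k}\|<\infty$ (\cref{lem:sum.exponential}) bounds this portion by a $t$-independent multiple of $\exp(-\tfrac12(r-1)^{\mu_{1}})\|a\|_{\bm{x},\nu,\mu_{1}}$, which is $\le C\,f_{\nu,\mu}(r)\|a\|_{\bm{x},\nu,\mu_{1}}$ since $\mu_{1}>\mu$. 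For $k>r-1-\ell(t)$ one uses only $\|\alpha(\sfG\,;t)(a_{k})-\Pi_{\bm{x},r}(\alpha(\sfG\,;t)(a_{k}))\|\le2\|a_{k}\|$, whose sum (\cref{lem:sum.exponential}) is at most a constant times $\|a\|_{\bm{x},\nu,\mu_{1}}$ when $r<\ell(t)+1$ and at most a constant times $(1+r-1-\ell(t))^{1-\mu_{1}-\nu}\exp(-(r-1-\ell(t))^{\mu_{1}})\|a\|_{\bm{x},\nu,\mu_{1}}$ otherwise. Dividing both contributions by $f_{\nu,\mu}(r)$ and maximizing over $r$ (the extremum occurring for $r\approx\ell(t)+1$) produces $\Upsilon_{\sfG,\nu,\mu}(t)$ of order $(1+t)^{\nu/\mu_{1}}\exp(c\,t^{\mu/\mu_{1}})$, hence $\prec\exp(t^{\mu_{\star}})$ for every $\mu_{\star}>\mu/\mu_{1}$; the norm clause of the seminorm is trivial because $\alpha(\sfG\,;t)$ is isometric.

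For part (3) I would use the Duhamel identity $\alpha(\sfG\,;t)(a)-\alpha(\sfG'\,;t)(a)=-\int_{0}^{t}\big(\alpha(\sfG\,;t)\circ\alpha(\sfG\,;s)^{-1}\big)\big(\sfad(\sfG(s)-\sfG'(s))(\alpha(\sfG'\,;s)(a))\big)\,ds$ on the core $\cA_{\Lambda}^{\al}$. One then estimates the $\|\cdot\|_{\bm{x},\nu,\mu}$-seminorm of the integrand by: part (2) applied to the outer automorphism $\alpha(\sfG\,;t)\circ\alpha(\sfG\,;s)^{-1}$; the continuity bound in the proof of \cref{lem:almost.local.continuity} for $\sfad(\sfG(s)-\sfG'(s))$, which trades $\vvert\sfG-\sfG'\vvert_{\nu_{1},\mu_{3}}$ and a $\mu_{3}$-seminorm of $\alpha(\sfG'\,;s)(a)$ for a $\mu$-seminorm of the commutator; and part (2) once more for $\alpha(\sfG'\,;s)(a)$, bringing in $\|a\|_{\bm{x},\nu_{1},\mu_{4}}$. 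A direct estimate of the resulting time integral, parallel to the one carried out in part (2), then establishes the stated bound on $\Upsilon^{\mathrm{rel}}_{\sfG,\sfG',\nu,\mu}$.

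The step I expect to be the main obstacle is part (2): one must choose the cutoff $\ell(t)$ and chain together \cref{lem:sum.exponential} with the subadditivity of $r\mapsto r^{\mu_{1}}$ so that the ``interior'' of the light cone — where only the trivial estimate $2\|a_{k}\|$ is available — is reabsorbed, up to the sub-exponential factor $\exp(c\,t^{\mu/\mu_{1}})$, into the decay $f_{\nu,\mu}(r)$. Everything else is a lengthy but routine manipulation of the $F$-function inequalities already established.
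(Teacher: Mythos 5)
Parts (1) and (3) of your proposal are essentially identical to the paper's proof: part (1) is exactly \cref{rmk:CE.average} plus \eqref{eqn:Lieb.Robinson}, and part (3) uses precisely the Duhamel identity \eqref{eqn:relative.evolution} followed by the chain of estimates invoking \cref{lem:almost.local.continuity} and part (2) twice, producing $\Upsilon^{\mathrm{rel}}(t)=t\,d_{\nu,\mu_1}\Upsilon_{\sfG,\nu,\mu}(t)\Upsilon_{\sfG',\nu_1,\mu_3}(t)$.

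Part (2), however, takes a genuinely different route. The paper splits $a=\Pi_{\bm{x},r/2}(a)+(a-\Pi_{\bm{x},r/2}(a))$ into exactly two terms: the tail is handled by contractivity of $\alpha$ and $\Pi_{\bm{x},r}$ alone, while the head $\Pi_{\bm{x},r/2}(a)\in\cA_{B_{r/2}(\bm{x})}$ is fed into part (1), and the dichotomy is then on the \emph{radius} $r\lessgtr R_{\sfG,\nu,\mu}(t)$. You instead telescope $a=\sum_{k}a_{k}$ into annuli $a_{k}=\Pi_{\bm{x},k+1}(a)-\Pi_{\bm{x},k}(a)$, apply part (1) to each $a_{k}$ with $Y=B_{r}(\bm{x})$, and dichotomize on the \emph{annulus index} $k\lessgtr r-1-\ell(t)$. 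Both arrive at $\Upsilon_{\sfG,\nu,\mu}(t)\sim(1+t^{1/\mu_1})^{\nu}\exp(c\,t^{\mu/\mu_1})$, and hence the same $\prec\exp(t^{\mu_\star})$ conclusion. What each buys: the paper's two-term cutoff keeps the bookkeeping minimal (no sum over $k$, a single application of part (1)), at the price of having to push the norm subscript from $\nu$ up to $\nu_1=\nu+l_\Lambda+1$ to eat the $\#B_{r/2}(\bm{x})\lesssim(1+r/2)^{l_\Lambda}$ factor inside $D_{\nu_1,\mu_1}$; your telescoping version exposes the light-cone geometry more explicitly and can borrow the polynomial decay $(1+k)^{-\nu}$ from $\|a_k\|$ rather than from the lattice growth, but it requires the additional subadditivity step $(r-k-1)^{\mu_1}+k^{\mu_1}\ge(r-1)^{\mu_1}$, a sum over $k$ controlled via \cref{lem:sum.exponential}, and a separate supremum over $r$. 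Your parameter choices appear sound, and the absorption of $\exp(2tI_{\sfG,\nu,\mu_1})-1$ by $\exp(-\tfrac12(r-k-1)^{\mu_1})$ for $k\le r-1-\ell(t)$ works as stated, so I see no gap --- it is a valid alternative to the shorter argument the paper actually runs.
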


\begin{proof}
    By \cref{rmk:CE.average} and the Lieb--Robinson bound \eqref{eqn:Lieb.Robinson}, (1) is verified as
    \begin{align*}
    {}&{} \big\| \alpha (\sfG \,; t)(a)- \Pi_{Y}\big( \alpha (\sfG \,; t)(a)\big) \big\|\\
    \leq {}&{}  \sup_{u \in \cU(\cA_
    {Y^c})} \| [u, \alpha(\sfG\,;t)(a)]\| \leq   2 D_{\nu ,\mu}(X,Y^c) \cdot (\exp(2 C_\nu L_{\nu,\mu}\vvert \sfG \vvert_{\nu_3,\mu_1} t) -1 )  \cdot \| a\|. 
    \end{align*}

We then show (2) following the line of \cite{moonAutomorphicEquivalenceGapped2020}*{Lemmas 4.6, 4.11}. Since $\alpha(\sfG \,; t)$ and $ \Pi_{\bm{x},r}$ are contractive with respect to the operator norm, for any $r \in \bR_{>0}$ we have
\begin{align*}
    {}&{} f_{\nu,\mu}(r)^{-1} \cdot \| \alpha(\sfG \,; t)(a - \Pi_{\bm{x},r/2}(a)) - \Pi_{\bm{x},r}\big(\alpha(\sfG \,; t) (a - \Pi_{\bm{x},r/2}(a)) \big)\| \\
    \leq {}&{} 2f_{\nu,\mu}(r)^{-1} \cdot \| a - \Pi_{\bm{x},r/2}(a)\|
    \leq 2 c_{\nu,\mu,2} \cdot \| a\| _{\bm{x},\nu,\mu_1},
\end{align*}
where $c_{\nu,\mu,2}$ is the constant in \eqref{eqn:const.small.c}. Let
\begin{align*} 
    R_{\sfG,\nu,\mu}(t) \coloneqq 2 \cdot (2tC_{\nu_1} L_{\nu_1,\mu_1} \vvert \sfG \vvert_{\nu_4,\mu_2} )^{1/\mu_1}, \quad \text{ that is, } \, (R_{\sfG,\nu,\mu}(t)/2)^{\mu_1} = 2tC_{\nu_1}L_{\nu_1,\mu_1} \vvert \sfG \vvert_{\nu_4,\mu_2}.
\end{align*}

For any $t>0$, we have 
\begin{align*}
    {}&{} f_{\nu,\mu}(r)^{-1} \cdot \big\|  \alpha(\sfG  \, ;t)\big( \Pi_{\bm{x},r/2}(a) \big) - \Pi_{\bm{x},r}\big( \alpha(\sfG  \, ;t) \big( \Pi_{\bm{x},r/2}(a) \big) \big) \big\| \\
    \leq {}&{}
    f_{\nu,\mu}(r)^{-1} \cdot 2 \| a \| \leq 2f_{\nu,\mu}(R_{\sfG,\nu,\mu}(t))^{-1} \cdot \|a\|
\intertext{if $r \leq R_{\sfG,\nu,\mu}(t)$, and} 
    {}&{} f_{\nu,\mu}(r)^{-1} \cdot \big\|  \alpha(\sfG  \, ;t)\big( \Pi_{\bm{x},r/2}(a) \big) - \Pi_{\bm{x},r}\big( \alpha(\sfG  \, ;t) \big( \Pi_{\bm{x},r/2}(a) \big) \big) \big\| \\
    \leq {}&{}
    f_{\nu,\mu}(r)^{-1} \cdot 2 \cdot 2^{l_{\Lambda}+1}\kappa_{\Lambda}  \cdot \kappa_{\Lambda} (1+r/2)^{l_{\Lambda}}\cdot f_{\nu,\mu_1}(r/2) \cdot \exp (2tC_{\nu_1}L_{\nu_1,\mu_1}  \vvert \sfG \vvert_{\nu_4,\mu_2} )  \| a \| \\
    \leq {}&{} 2^{l_{\Lambda}+2}\kappa_{\Lambda}^2  \cdot  f_{\nu - l_{\Lambda},\mu_1}(r/2) \cdot f_{\nu,\mu}(r)^{-1} \cdot f_{\mu_1}(R_{\sfG,\nu,\mu}(t))^{-1} \cdot \| a \|
\end{align*}
if $r \geq R_{\sfG,\nu,\mu}(t)$. 
Here, we use (1) for the first inequality of the latter case. 
Moreover, by fixing $r_0 >0$ so that $f_{\nu-l_{\Lambda},\mu_1}(r/2)/f_{\nu,\mu}(r)$ is monotonously decreasing on $\bR_{\geq r_0}$, we obtain that  
\begin{align*}
    {}&{} \sup_{r \geq R_{\sfG,\nu,\mu}(t)} f_{\nu - l_{\Lambda},\mu_1}(r/2) \cdot f_{\nu,\mu}(r)^{-1} \cdot  f_{\mu_1}(R_{\sfG,\nu,\mu}(t))^{-1} \\
    \leq {}&{} 
       \begin{cases}
           \big( \sup_{r >0} f_{\nu-l_{\Lambda},\mu_1}(r/2) / f_{\nu,\mu}(r) \big) \cdot f_{\mu_1}(r_0/2)^{-1} & \text{ if $R_{\sfG,\nu,\mu}(t) \leq r_0/2$,} \\
           f_{\nu,\mu} \big( 2 R_{\sfG,\nu,\mu}(t) \big)^{-1} & \text{ if $R_{\sfG,\nu,\mu}(t) \geq r_0/2$.}
       \end{cases}
\end{align*}
Moreover, since $ f_{\nu,\mu}(2R_{\sfG,\nu,\mu}(t))^{-1}$ is a multiple of a polynomial on $t$ with \begin{align*}
    f_\mu(2R_{\sfG,\nu,\mu}(t))^{-1} = \exp \Big( 2^{2\mu} (2C_{\nu_1}L_{\nu_1,\mu_1} \vvert \sfG \vvert_{\nu_4,\mu_2})^{\mu/\mu_1} \cdot t^{\mu/\mu_1} \Big) ,
\end{align*}
we have $f_{\nu,\mu}(2R_{\sfG,\nu,\mu}(t))^{-1} \prec \exp (t^{\mu_{\star}})$ for any $\mu/\mu_1 < \mu_\ast < 1$. 
Therefore, the constant 
\begin{align}
\begin{split}
    \Upsilon_{\sfG,\nu,\mu}(t) \coloneqq \max \{ 2 , 2^{l_{\Lambda}+2}\kappa_{\Lambda}^2\} \cdot (\ast){}&{}, \\
    (\ast) = \max \Big\{  \big( \sup_{r >0} f_{\nu-l_{\Lambda},\mu_1}(r/2) / f_{\nu,\mu}(r) \big) \cdot f_{\mu_1}(r_0/2)^{-1} {}&{},  f_{\nu,\mu} \big( 2R_{\sfG,\nu,\mu}(t) \big)^{-1} \Big\} 
    \end{split}\label{eqn:Upsilon}
\end{align}
fulfills the desired conditions. Note that this $\Upsilon_{\sfG,\nu,\mu}(t)$ increases monotonously as a function on $t$. 

In the proof of (3), we use 
\begin{align}
\begin{split}
    \alpha(\sfG\,; t)(a) - \alpha(\sfG'\,; t)(a) ={}&{} \int_0^t \frac{d}{ds} \alpha(\sfG \,; s,t) \circ \alpha(\sfG' \,; 0,s) (a) ds \\
    ={}&{} \int_{[0,t]} \alpha(\sfG \,; s,t) \big( \big[ \sfG'(s) - \sfG(s) , \alpha(\sfG' \,;0,s)(a) \big] \big)ds.
    \end{split} \label{eqn:relative.evolution}
\end{align}
Here, the right hand side is the integration of a $\cA_{\Lambda}^{\al}$-valued $1$-form. This and \cref{lem:almost.local.continuity} proves (3) as
\begin{align}
\begin{split}
    {}&{} \| \alpha(\sfG \,; t)(a) - \alpha(\sfG' \,; t)(a)\|_{\bm{x},\nu,\mu} \\
    \leq {}&{} \int_0^t \big\| \alpha(\sfG \,; s,t) \circ \sfad \big( \sfG'(s) - \sfG(s)\big) \circ  \alpha(\sfG' \,; 0, s)(a) \big\|_{\bm{x},\nu,\mu} \\
    \leq {}&{} \int_0^t \Upsilon_{\sfG,\nu,\mu}(t-s) \cdot \big\| \sfad \big( \sfG'(s) - \sfG(s)\big) \circ \alpha(\sfG'\,;0, s)(a) \big\|_{\bm{x},\nu,\mu_1} ds \\
    \leq {}&{} d_{\nu,\mu_1}\int_0^t \Upsilon_{\sfG,\nu,\mu}(t-s) \cdot \vvert \sfG-\sfG' \vvert_{\nu_1,\mu_3} \cdot \| \alpha(\sfG'\,;0, s)(a) \|_{\bm{x},\nu_1,\mu_3}ds  \\
    \leq {}&{} d_{\nu,\mu_1}\int_0^t \Upsilon_{\sfG,\nu,\mu}(t-s) \cdot \vvert \sfG-\sfG' \vvert_{\nu_1,\mu_3} \cdot \Upsilon_{\sfG',\nu_1,\mu_3}(s) \cdot  \| a \|_{\bm{x},\nu_1,\mu_4} ds  \\
    \leq {}&{} \Upsilon^{\mathrm{rel}}_{\sfG,\sfG',\nu,\mu}(t) \cdot \vvert \sfG-\sfG'\vvert_{\nu_1,\mu_3} \cdot \|a\|_{\bm{x},\nu_1,\mu_4},
\end{split} \label{eqn:relative.Lieb.Robinson}
\end{align}
where $d_{\nu,\mu_1}$ is the constant in \cref{lem:almost.local.continuity} and
\begin{align*}
    \Upsilon^{\mathrm{rel}}_{\sfG,\sfG',\nu,\mu}(t) \coloneqq t \cdot d_{\nu,\mu_1}  \Upsilon_{\sfG,\nu,\mu}(t)\Upsilon_{\sfG',\nu_1,\mu_3}(t).
\end{align*}
Since $\mu/\mu_1 = 2\mu/(1+\mu)$ is monotonously increasing as $\mu \nearrow 1$, the function $\Upsilon^{\mathrm{rel}}_{\sfG,\sfG',\nu,\mu}(t) \cdot e^{-t^{\mu_{\star}}}$ is integrable for any ${\mu_\star} > \mu_3/\mu_4$, which concludes that $\Upsilon^{\mathrm{rel}}_{\sfG,\sfG',\nu,\mu}(t) \prec \exp(t^{\mu_\star})$. 
\end{proof}
By \cref{prp:Lieb.Robinson} (2), an LG automorphism $\alpha(\sfG \, ;t) $ acts on $\cA_{\Lambda}^{\al}$ continuously with respect to the almost local topology. 
Moreover, since the estimate in \cref{prp:Lieb.Robinson} (2) is uniform on $\bm{x} \in \Lambda$, the automorphism $\alpha (\sfG\,; t)$ also acts on the space of UAL derivations $\fDer ^{\al}_{\Lambda}$ by 
\begin{align}
    \alpha(\sfG \, ;t)( \sfG') \in \fDer ^{\al}_{\Lambda}, \quad \big( \alpha(\sfG \, ;t)( \sfG')\big)_{\bm{x}} \coloneqq \alpha(\sfG \, ;t)(\sfG'_{\bm{x}}).
    \label{eqn:UAL.acting.derivation}
\end{align}

\begin{lem}\label{lem:smooth}
Let the space $\cB(\cA_{\Lambda}^{\al})$ of bounded linear maps on the Fr\'{e}chet space $\cA_{\Lambda}^{\al} $ be equipped with the weak topology generated by seminorms $\sfT \mapsto \|\sfT(a)\|_{\bm{x},f}$ for any $f \in \cF$ and $a \in \cA_{\Lambda}^{\al}$. 
\begin{enumerate}
    \item A smooth function $\sfG \colon \sM \to \fDer ^{\al}_{\Lambda}$ in the sense of \cref{defn:parallel.transport} (1) induces a Fr\'{e}chet smooth map from $\sM$ to the set $\mathrm{Der}(\cA_{\Lambda}^{\al}) \subset \cB(\cA_{\Lambda}^{\al})$ of continuous derivations on $\cA_{\Lambda}^{\al}$.
    \item A smooth map $\alpha \colon \sM \to \cG_{\Lambda}$ in the sense of \cref{defn:parallel.transport} (4) is Fr\'{e}chet smooth map from $\sM $ to the group $\Aut(\cA_{\Lambda}^{\al }) \subset \cB(\cA_{\Lambda}^\al) $ of continuous $\ast$-automorphisms on $\cA_{\Lambda}^{\al}$.
    \item For smooth maps $\alpha \colon \sM \to \cG_{\Lambda}$ and $\sfG \colon \sM \to \fDer ^{\al}_{\Lambda}$, the following map is also smooth;
    \begin{align*} 
    \sM \ni p \mapsto \alpha_{p}(\sfG(p))  \in \fDer ^{\al}_{\Lambda}.
    \end{align*}
    \item For smooth maps $\alpha \colon \sM \to \cG_{\Lambda}$ and $\sfH \colon \sM \to \fH_{\Lambda}^{\al}$, the following map is also smooth;
    \begin{align*} 
    \sM \ni p \mapsto \alpha(\sfH ) (p) \coloneqq (\alpha_{p}(\sfH(p)), \omega_{\sfH(p)}\circ \alpha_p^{-1})  \in \fH_{\Lambda}^{\al}.
    \end{align*}
\end{enumerate}
\end{lem}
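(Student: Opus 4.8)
The plan is to establish the four assertions in the listed order, since (2) uses (1), (3) uses (2), and (4) uses (2) and (3); all four are local, so I fix a relatively compact chart $(\sU;x_1,\dots,x_n)$ throughout, and I use that Fréchet smoothness of a map into $\cB(\cA_{\Lambda}^{\al})$ with the stated weak topology means exactly that $p \mapsto \sfT_p(a)$ is Fréchet smooth $\sM \to \cA_{\Lambda}^{\al}$ for each $a$. For (1), each summand of $\sfad(\sfG(p))(a) = \sum_{\bm{x}}[\sfG_{\bm{x}}(p),a]$ is Fréchet smooth in $p$ with $\partial^I_p[\sfG_{\bm{x}}(p),a] = [\partial^I_p\sfG_{\bm{x}}(p),a]$, by \cref{defn:smooth}(1) and the continuity of multiplication recorded in \eqref{eqn:Frechet.algebra}; re-running the estimates in the proof of \cref{lem:almost.local.continuity} with $\partial^I_p\sfG$ in place of $\sfG$ and taking $\sup_{p \in \sU}$ (finite by \cref{defn:smooth}(2)) shows the tails $\sum_{\bm{x} \notin B_{r/2}(\bm{x}_0)}[\partial^I_p\sfG_{\bm{x}}(p),a]$ tend to $0$ uniformly on $\sU$ in every seminorm, so the series and all its $p$-derivatives converge uniformly on $\sU$ in the Fréchet topology; hence $p \mapsto \sfad(\sfG(p))(a)$ is smooth, and it is a continuous derivation by \cref{lem:almost.local.continuity}.

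For (2), by the definition of smoothness of $\alpha$ it suffices to fix $\sfG \in \Omega^1(\sU \times [0,1], \fDer^{\al}_{\Lambda})$ and show $p \mapsto \beta_p(1)$ is Fréchet smooth, where $\beta_p(t)$ solves $\partial_t\beta_p(t)(a) = -\sfad(\sfG_0(p,t))(\beta_p(t)(a))$, $\beta_p(0)=\id$, with $\sfG_0$ the $dt$-component of $\sfG$. Fixing $a$, I claim $p \mapsto \beta_p(t)(a)$ is $C^1$ with
\[
 \partial_{x_i}[\beta_p(t)(a)] = -\int_0^t \alpha(\sfG_0(p,\blank)\,;s,t)\big(\sfad(\partial_{x_i}\sfG_0(p,s))(\beta_p(s)(a))\big)\,ds,
\]
the Duhamel solution of the variational equation. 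The integrand is an $s$-continuous almost-local-valued expression and the integral converges in every seminorm $\|\blank\|_{\bm{x},\nu,\mu}$, by Part (1), \cref{lem:almost.local.continuity}, and \cref{prp:Lieb.Robinson}(2) applied to $\alpha(\sfG_0(p,\blank)\,;s,t)$ and to $\beta_p(s)$. To identify this integral as the derivative, I expand the difference quotient using \eqref{eqn:relative.evolution} with $\sfG' := \sfG_0(p+he_i,\blank)$,
\[
 \beta_{p+he_i}(t)(a) - \beta_p(t)(a) = \int_0^t \alpha(\sfG'\,;s,t)\big(\sfad(\sfG_0(p,s) - \sfG_0(p+he_i,s))(\beta_p(s)(a))\big)\,ds,
\]
divide by $h$, and pass to the limit: $h^{-1}(\sfG_0(p,s)-\sfG_0(p+he_i,s)) \to -\partial_{x_i}\sfG_0(p,s)$ in $\fDer^{\al}_{\Lambda}$ uniformly in $s$ (Banach-valued Taylor with \cref{defn:smooth}(2), cf.\ \cref{rmk:MO}), $\alpha(\sfG'\,;s,t) \to \alpha(\sfG_0(p,\blank)\,;s,t)$ (from \cref{prp:Lieb.Robinson}(3)), and dominated convergence holds by the $s$-uniform bounds of \cref{prp:Lieb.Robinson}(2),(3). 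Since the Duhamel formula is again of the form ``a parallel transport applied to an almost-local expression built from $\sfG_0$ and finitely many of its $p$-derivatives'', Leibniz-differentiating under the integral sign and iterating the same argument gives $C^k$-regularity for all $k$; hence $p \mapsto \alpha_p(\sfG\,;1) = \beta_p(1)$ is Fréchet smooth.

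For (3), writing locally $\alpha_{p'} = \alpha_{p'}(\sfF\,;1)$, \cref{prp:Lieb.Robinson}(2) gives $\vvert\alpha_p(\sfG(p))\vvert_{\nu,\mu} \le \Upsilon_{\sfF,\nu,\mu}(1)\cdot\vvert\sfG(p)\vvert_{\nu,\mu_1} < \infty$ uniformly in $\bm{x}$, so $\alpha_p(\sfG(p)) \in \fDer^{\al}_{\Lambda}$; for smoothness (conditions (1),(2) of \cref{defn:smooth}) I expand the difference quotient of $p \mapsto \alpha_p(\sfF\,;1)(\sfG_{\bm{x}}(p))$ as $\alpha_{p+he_i}(\sfF\,;1)\big(h^{-1}(\sfG_{\bm{x}}(p+he_i) - \sfG_{\bm{x}}(p))\big) + h^{-1}\big(\alpha_{p+he_i}(\sfF\,;1) - \alpha_p(\sfF\,;1)\big)(\sfG_{\bm{x}}(p))$, whose limit is $\alpha_p(\sfF\,;1)(\partial_{x_i}\sfG_{\bm{x}}(p)) + (\partial_{x_i}\alpha_p(\sfF\,;1))(\sfG_{\bm{x}}(p))$ by Part (2), \cref{defn:smooth}(1) and \cref{prp:Lieb.Robinson}(2),(3), and I iterate to get a Leibniz formula whose terms are bounded uniformly in $\bm{x}$ by products of $\Upsilon$-factors and the finite seminorms $\vvert\sfG\vvert_{\sU,C^{|J|},f'}$. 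For (4), $\alpha(\sfH)(p)$ is a gapped UAL Hamiltonian with gap $1$: $\alpha_p(\sfH(p)) \in i\fDer^{\al}_{\Lambda}$ by (3), and since $\alpha_p$ intertwines $\tau_{\sfH(p)}$ with $\tau_{\alpha_p(\sfH(p))}$ it induces a unitary equivalence of the two GNS triples, so $\omega_{\sfH(p)}\circ\alpha_p^{-1}$ is a non-degenerate ground state and the spectral gap $1$ is preserved; conditions (1),(2) of \cref{defn:smooth} for $\alpha(i\sfH)$ are (3), and for condition (3) one notes $\alpha^{-1}$ is again smooth (locally $\alpha_{p'}^{-1} = \alpha_{p'}(\widetilde{\sfF}\,;1)$ with $\widetilde{\sfF}(p',t) := -\sfF(p',1-t)$), so $p \mapsto \alpha_p^{-1}(a)$ is Fréchet smooth by (2), and then expanding $\partial^I_p[\omega_{\sfH(p)}(\alpha_p^{-1}(a))]$ by Leibniz and bounding each term by \eqref{eqn:differentiable} for $\sfH$ together with \cref{prp:Lieb.Robinson}(2) gives $\|\omega_{\alpha(\sfH)}(a)\|_{\sU,C^k} \le C\|a\|_{\bm{x},g}$ for a suitable $g \in \cF$.

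The substantive step is Part (2): ODE parameter-smoothness is not available off the shelf in the Fréchet space $\cA_{\Lambda}^{\al}$, so one must produce the derivative of the parallel transport explicitly via the Duhamel formula for the variational equation and then verify both the convergence of the defining integrals and the difference-quotient limit by hand — which is precisely where the $s$-uniform Lieb--Robinson-type bounds of \cref{prp:Lieb.Robinson}(2),(3) and the $C^k$-seminorm control of \cref{defn:smooth}(2) are essential. Parts (1), (3), (4) are then bookkeeping: termwise differentiation of uniformly convergent series and bilinear product-rule estimates, kept uniform in the lattice point $\bm{x}$.
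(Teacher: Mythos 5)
Your proposal is correct and follows essentially the same route as the paper's own proof: part~(1) by applying the estimate of \cref{lem:almost.local.continuity} to $\partial^I\sfG$ and using that the semi-norms from \cref{defn:smooth}(2) control $\vvert\partial^I\sfG\vvert_f$ uniformly on $\sU$; part~(2) via the Duhamel identity \eqref{eqn:relative.evolution} to express the difference quotient, with the relative Lieb--Robinson bound \eqref{eqn:relative.Lieb.Robinson} (i.e.\ \cref{prp:Lieb.Robinson}(3)) controlling the error term, then iterating to get all orders; part~(3) from the $\bm{x}$-uniformity of the same estimates; and part~(4) by decomposing into the Hamiltonian smoothness (part~(3)) and the state smoothness (part~(2)). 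The only material addition you make beyond the paper's terse write-up is to spell out that $\alpha(\sfH)(p)$ remains a gapped UAL Hamiltonian with gap $1$ (GNS unitary equivalence), which the paper leaves implicit, and to exhibit the smooth inverse $\alpha_p^{-1}=\alpha_p(\widetilde{\sfF}\,;1)$; both are faithful to what the paper intends.
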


\begin{proof}
The claim (1) follows from \eqref{lem:almost.local.continuity} as
    \begin{align*}
    \| \partial^I([\sfG,a]) \|_{\bm{x},\nu,\mu} ={}&{}  \| [\partial^I\sfG,a] \|_{\bm{x},\nu,\mu}
    \leq d_{\nu,\mu} \cdot \vvert \partial^I\sfG \vvert_{\nu_1,\mu_2} \cdot \| a \|_{\bm{x},\nu_1,\mu_2}\\
    \leq {}&{}d_{\nu,\mu} \cdot \vvert \sfG_{\bm{y}} \vvert_{\sU,C^k,\nu_1,\mu_2} \cdot \| a \|_{\bm{x},\nu_1,\mu_2}.
    \end{align*}

The proof of (2) follows \cite{kapustinLocalNoetherTheorem2022}*{Proposition E.2}. First, by \eqref{eqn:relative.evolution}, $p \mapsto \alpha_p(\sfG\,; 1)(a)$ is continuous. Moreover, again by \eqref{eqn:relative.evolution}, we have 
\begin{align*}
    {}&{} \frac{\alpha_{p_s}(\sfG \,; 1)(a) - \alpha_{p_0}(\sfG\,; 1)(a)}{s} - \int_{u \in [0,1]}\alpha_{p_s}(\sfG\,; u,t) \circ \sfad (\partial_{s}\sfG (p_0,u)) \circ \alpha_{p_0}(\sfG \,; u) \\
    ={}&{} \int_{u \in [0,1]} \alpha_{p_s}(\sfG \,; u,t) \circ \sfad \bigg(\frac{\sfG (p_s,u) - \sfG(p_0,u)}{s}  - \partial_s \sfG (p_0,u) \bigg)  \circ \alpha_{p_0}(\sfG \,; u)(a) 
\end{align*}
for any smooth curve $s \mapsto p_s$ in $\sM$. 
By the same bound as \eqref{eqn:relative.Lieb.Robinson}, we obtain that the right hand side converges to $0 $ as $s \to 0$, and hence
\[
    \frac{d}{ds}\bigg|_{s=0} \alpha_{p_s}(\sfG\,; 1)(a) = \int_{u \in [0,1]} \alpha_{p_0}(\sfG\,; u,t) \circ \sfad (\partial_{s}\sfG) \circ \alpha_{p_0}(\sfG \,; u)(a),
\]
which is again continuous. Consequently, the function $\alpha(\sfG\,; 1)(a)$ is of class $C^1$. Iterating the same argument shows that $\alpha(\sfG \,; 1)(a)$ is a smooth function.

The above constants $C_1, C_2$ are independent of $\bm{x} \in \Lambda$. This also shows (3). 
Finally, (4) follows from (3), and the smoothness of 
\begin{align*}
    \sM \ni p \mapsto \omega_{\alpha_p(\sfH(p))} = \omega_{\sfH(p)} \circ \alpha_p^{-1}  \in \fS(\cA_{\Lambda}^{\al}),
\end{align*}
which is concluded from (2). 
\end{proof}

\begin{prp}\label{prp:Lieb.Robinson.Ck}
    Let $\sU$ be an open subset of $\bR^n$. The following hold. 
    \begin{enumerate}
        \item Let $\sfG  \in \Omega^1(\sU \times [0,1], \fDer ^{\al}_{\Lambda}) $.  Then there is a positive $\bR$-valued function $\Upsilon_{\sfG,\nu,\mu}^{(k)}(t)$ on $\bR_{\geq 0}$ such that $\Upsilon_{\sfG,\nu,\mu}^{(k)}(t) \prec \exp (t^{{\mu_\star}})$ for some ${\mu_\star} <1$ and 
        \begin{align*}
            \| \alpha_p(\sfG \,; t)(a) \|_{\sU,C^k,\bm{x},\nu,\mu} \leq \Upsilon^{ (k)}_{\sfG,\nu,\mu} (t) \cdot \| a \|_{\bm{x},\nu_k,\mu_{3k+1}}.
        \end{align*}
        \item Let $\sfG_1, \sfG_2 \in \Omega^1(\sU \times [0,1], \fDer ^{\al}_{\Lambda}) $.  Then there is a positive $\bR$-valued function $\Upsilon_{\sfG_1,\sfG_2,\nu,\mu}^{\mathrm{rel},(k)}(t)$ on $\bR_{\geq 0}$ such that $\Upsilon_{\sfG_1,\sfG_2,\nu,\mu}^{\mathrm{rel},(k)}(t) \prec \exp (t^{{\mu_\star}})$ for some ${\mu_\star} <1$ and 
        \begin{align*}
            \| \alpha_p(\sfG_1 \,; t)(a) - \alpha_p (\sfG_2 \,; t)(a) \|_{C^k,\bm{x},\nu,\mu} \leq \vvert \sfG_1 - \sfG_2 \vvert_{\sU,C^k,\bm{x},\nu_{k},\mu_{3k+1}} \cdot \Upsilon^{\mathrm{rel}, (k)}_{\sfG_1,\sfG_2,\nu,\mu} (t) \cdot \| a \|_{\bm{x},\nu_k,\mu_{3k+1}}.
        \end{align*}
    \end{enumerate}
\end{prp}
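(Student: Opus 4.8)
The plan is to prove both estimates by induction on $k$, with the case $k=0$ being \cref{prp:Lieb.Robinson} (2) and (3) respectively; at each inductive step one differentiates the defining ODE of the parallel transport in the parameter and feeds the resulting Duhamel formula into \cref{prp:Lieb.Robinson} and \cref{lem:almost.local.continuity}. First I would record the parameter-dependent version of \cref{lem:almost.local.continuity}: since $\sfad$ is linear and the commutator obeys $\partial^I[\sfG,a]=\sum_{J\leq I}\binom{I}{J}[\partial^J\sfG,\partial^{I-J}a]$, applying \cref{lem:almost.local.continuity} to each term gives a constant $d'_{\nu,\mu,k}$ with
\[
\| [\sfG,a] \|_{\sU,C^k,\bm{x},\nu,\mu}\leq d'_{\nu,\mu,k}\cdot \vvert \sfG \vvert_{\sU,C^k,\nu_1,\mu_2}\cdot \| a \|_{\sU,C^k,\bm{x},\nu_1,\mu_2}
\]
for smooth $p$-families $\sfG$, $a$ valued in $\fDer^{\al}_{\Lambda}$, $\cA_{\Lambda}^{\al}$. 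Then, fixing $a\in\cA_{\Lambda}^{\al}$ and writing $\beta_p(t)\coloneqq\alpha_p(\sfG\,;t)(a)$ — which is $C^\infty$ in $p$ by \cref{lem:smooth} (2) and solves $\partial_t\beta_p(t)=-\sfad(\sfG_{(p,t)})\beta_p(t)$, $\beta_p(0)=a$ — I would differentiate this equation: since $\partial^I a=0$ for $|I|\geq1$, Leibniz's rule gives $\big(\partial_t+\sfad(\sfG_{(p,t)})\big)\partial^I\beta_p=-\sum_{0<J\leq I}\binom{I}{J}\sfad(\partial^J\sfG_{(p,t)})\,\partial^{I-J}\beta_p$, whence the Duhamel formula already used in the proof of \cref{lem:smooth} (2) (cf.\ \eqref{eqn:relative.evolution}) yields
\[
\partial^I\beta_p(t)=-\sum_{0<J\leq I}\binom{I}{J}\int_0^t \alpha_p(\sfG\,;s,t)\big(\sfad(\partial^J\sfG_{(p,s)})\,\partial^{I-J}\beta_p(s)\big)\,ds ,
\]
in which the outer propagator $\alpha_p(\sfG\,;s,t)$ is undifferentiated and $\partial^{I-J}\beta_p$ is of order $|I-J|<k$.

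To bound $\|\partial^I\beta_p(t)\|_{\bm{x},\nu,\mu}$ from this formula I would estimate each summand by: \cref{prp:Lieb.Robinson} (2) for the outer propagator (one step down the $\mu$-ladder, none on the $\nu$-ladder); the $C^k$-bracket bound above for $\sfad(\partial^J\sfG_{(p,s)})$ (two $\mu$-steps and one $\nu$-step); and the inductive hypothesis — or, when $J=I$, \cref{prp:Lieb.Robinson} (2) again — for $\partial^{I-J}\beta_p(s)$. The longest chain of seminorm degradations removes one derivative at a time, consuming $k$ steps of the $\nu$-ladder and $3k+1$ steps of the $\mu$-ladder, so it ends on $\|a\|_{\bm{x},\nu_k,\mu_{3k+1}}$; since $\|a\|_{\bm{x},\nu',\mu'}\leq e\cdot\|a\|_{\bm{x},\nu'',\mu''}$ whenever $\nu'\leq\nu''$ and $\mu'\leq\mu''$, every shorter chain is dominated by this one. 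The accumulated time-dependent prefactor is a finite sum of nested integrals of products of the functions $\Upsilon_{\sfG,\bullet,\bullet}$ of \cref{prp:Lieb.Robinson} and powers of $t$; defining $\Upsilon^{(k)}_{\sfG,\nu,\mu}(t)$ to be this sum and using $(t-s)^a+s^b\leq 2t^{\max\{a,b\}}$ for $a,b\in(0,1)$ to control the iterated integrals, one checks $\Upsilon^{(k)}_{\sfG,\nu,\mu}(t)\prec\exp(t^{{\mu_\star}})$ for a ${\mu_\star}<1$ chosen depending on $k$, exactly as in \cref{prp:Lieb.Robinson} (2), (3). This proves (1).

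For (2) the same scheme applies, starting from the telescoping identity \eqref{eqn:relative.evolution}, $\alpha_p(\sfG_1\,;t)(a)-\alpha_p(\sfG_2\,;t)(a)=\int_0^t\alpha_p(\sfG_1\,;s,t)\big(\sfad(\sfG_{1,(p,s)}-\sfG_{2,(p,s)})\,\alpha_p(\sfG_2\,;s)(a)\big)\,ds$: I would differentiate it, Leibniz-expand, bound the $p$-derivatives of $\alpha_p(\sfG_i\,;\cdot)(a)$ by part (1), those of $\sfad(\partial^J(\sfG_1-\sfG_2))$ by the $C^k$-bracket bound, and the lower-order pieces $\partial^{I'}(\alpha_p(\sfG_1\,;t)(a)-\alpha_p(\sfG_2\,;t)(a))$ by the inductive hypothesis for (2). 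This isolates one factor $\vvert\sfG_1-\sfG_2\vvert_{\sU,C^k,\nu_k,\mu_{3k+1}}$ and leaves a prefactor $\Upsilon^{\mathrm{rel},(k)}_{\sfG_1,\sfG_2,\nu,\mu}(t)$ which, up to powers of $t$, is a product of the $\Upsilon^{(k)}$ from (1) and is therefore again $\prec\exp(t^{{\mu_\star}})$ for some ${\mu_\star}<1$.

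The main obstacle is organizational rather than conceptual: one must (i) arrange the Leibniz expansion of $\partial^I$ applied to the nested Duhamel integrals so that the induction closes on the order $k$; (ii) verify that the \emph{worst} chain of successive seminorm losses reaches exactly the indices $\nu_k$ and $\mu_{3k+1}$ of the statement, and not larger ones; and (iii) check that every iterated time-integral of products of the subexponential functions $\Upsilon$ remains $\prec\exp(t^{{\mu_\star}})$ — which, just as in \cref{prp:Lieb.Robinson}, forces ${\mu_\star}$ to be chosen as a function of $k$ creeping towards $1$ without reaching it.
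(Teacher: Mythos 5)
The proposal is correct and takes essentially the same approach as the paper: the paper writes out the fully unrolled $k$-fold iterated integral over the simplex $\Delta_j(t)$ obtained by repeated application of the Duhamel formula from the proof of \cref{lem:smooth} (2), whereas you present the identical expansion as a one-step recursion proved by induction on $k$. Your ladder accounting for the seminorm indices (one $\mu$-step per propagator via \cref{prp:Lieb.Robinson} (2), one $\nu$-step and two $\mu$-steps per $\sfad$ via \cref{lem:almost.local.continuity}, worst chain at $|J_l|=1$ throughout giving $(\nu_k,\mu_{3k+1})$) and your verification that finite products of the subexponential $\Upsilon$'s stay subexponential match the paper's estimates precisely, so unrolling your induction yields the paper's closed-form $\Upsilon^{(k)}$ and $\Upsilon^{\mathrm{rel},(k)}$.
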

\begin{proof}
    We first show (1). Set $\sfG_t \coloneqq \iota_{\frac{d}{dt}} \sfG$. By \cref{lem:smooth} (2) and its proof, the differential $\partial^I \alpha (\sfG\,;t)(a)$ is given by a finite sum of the terms of the form 
    \begin{align*}
        {}&{}\int_{\Delta_j(t)} \alpha_p(\sfG \,; t_j,t) \circ \sfad (\partial^{I_j} \sfG_t(p,t_j) ) \circ 
        \alpha_p(\sfG \,; t_{j-1},t_j) \circ \sfad( \partial^{I_{j-1}} \sfG_t(p,t_{j-1})) \circ \\
        {}&{} \hspace{15ex} \cdots \circ \alpha_p(\sfG \,; t_j,1) \circ \sfad(\partial^{I_1}\sfG_t(p,t_1) ) \circ \alpha_p(\sfG \,; 0,t_1) \ dt_1 \cdots dt_j,
    \end{align*}
    where $\Delta_j(t) \coloneqq \{ (t_1,\cdots,t_j) \mid 0 \leq t_1 \leq \cdots \leq t_j \leq t\}$ and $I_l$'s are multi-indices such that $|I_l| \geq 1$ and $I_1 + \cdots + I_j=I$. It is bounded by a constant multiple of 
    \begin{align*}
    \bigg( \prod_{l=0}^{j-1} \Upsilon_{\sfG,\nu_{l},\mu_{3l}}(t_{j-l+1}-t_{j-l}) \cdot d_{\nu_l,\mu_{3l+1}} \vvert \partial^{I_{j-l}}\sfG \vvert_{\nu_{l+1},\mu_{3l+3}}\bigg)  \cdot \Upsilon_{\sfG,\nu_j,\mu_{3j}}(t_1) \cdot \| a \|_{\nu_{j},\mu_{3j+1}}
    \end{align*}
    where $t_{j+1} \coloneqq t$. Since $ \vvert \partial^{I_{j-l}}\sfG \vvert_{\nu_{l+1},\mu_{3l+3}} \leq \vvert \sfG \vvert_{\sU,C^k,\nu_j,\mu_{3j}}$, there are constants $C_{I_1,\cdots,I_j}>0$ that does not depend on $\sfG$ such that 
    \[
    \Upsilon^{(k)}_{\sfG,\nu,\mu}(t) \coloneqq \sum_{\substack{|I_m| \geq 1 \\ |I_1 + \cdots + I_j| \leq k }} C_{I_1,\cdots,I_j} \cdot \Big(\prod_{l=0}^{j} \Upsilon_{\sfG,\nu_{l},\mu_{3l}}(t)\Big) \cdot  \vvert \sfG \vvert_{\sU,C^k,\nu_{j},\mu_{3j}}^j 
    \]
    is the desired function on $t$. 

    The claim (2) is proved similarly. Set $\sfG_{i,t} \coloneqq \iota_{\frac{d}{dt}} \sfG_i$. The differential $\partial^I (\alpha (\sfG_1\,;t)(a) - \alpha (\sfG_2\,;t)(a))$ is given by a finite sum of the terms of the form 
    \begin{align*}
        {}&{}\int_{\Delta_j(t)} \alpha_p(\sfG_1 \,; t_j,t) \circ \sfad (\partial^{I_j}\sfG_{1,t}(p,t_j) ) \circ 
        \alpha_p(\sfG_1 \,; t_{j-1},t_j) \circ \sfad (\partial^{I_{j-1}}\sfG_{1,t}(p,t_{j-1})) \circ \\
        {}&{} \hspace{15ex} \cdots \circ \alpha_p(\sfG_1 \,; t_m, t_{m+1}) \circ \sfad ( \partial^{I_m} (\sfG_{1,t}(p,t_m) - \sfG_{2,t}(p,t_m))) \circ \alpha_p(\sfG_2\,; t_{m-1},t_m) \circ  \\
        {}&{} \hspace{15ex} \cdots \circ \alpha_p(\sfG \,; t_j,1) \circ \sfad (\partial^{I_1}\sfG_{2,t}(p,t_1)) \circ \alpha_p(\sfG_2 \,; 0,t_1) \ dt_1 \cdots dt_j ,
    \end{align*}
    where $I_l$'s are multi-indices such that $|I_l| \geq 1$ unless $l=m$ and $I_1 + \cdots + I_j=I$. Thus, there are constants $C_{I_1,\cdots,I_j,m}'$ that does not depend on $\sfG_1, \sfG_2$ such that 
    \begin{align*}
    \Upsilon^{\mathrm{rel},(k)}_{\sfG_1,\sfG_2,\nu,\mu}(t) \coloneqq {}&{} 
    \sum_{\substack{|I_l| \geq 1 \text{ if $l \neq m$} \\ |I_1 + \cdots + I_j| \leq k}} C_{I_1,\cdots,I_j,m}' \cdot  \vvert \sfG_1 \vvert_{\sU,C^k,\nu_{j},\mu_{3j}}^{m'}    \cdot  \vvert \sfG_2 \vvert_{\sU,C^k,\nu_{j},\mu_{3j}}^{j-m'-1}\\
    {}&{} \qquad  \qquad \qquad  \cdot 
    \bigg( \prod_{l=0}^{m'}\Upsilon_{\sfG_1,\nu_{l},\mu_{3l}}(t)\bigg)   \cdot 
    \bigg( \prod_{l=m'+1}^{j} \Upsilon_{\sfG_2,\nu_{l},\mu_{3l}}(t) \bigg),
    \end{align*}
    where $m'\coloneqq j-m$, is the desired function on $t$. 
\end{proof}

 \begin{lem}\label{prp:relative.evolution.is.evolution}
    Let $\sfG, \sfG' \in \Omega^1([0,1], \fDer ^{\al}_{\Lambda})$. Then 
    \begin{align*}
    \alpha (\sfG' \,; t)^{-1}  \circ \alpha(\sfG \,; t) = \alpha\big( \alpha(\sfG' \,; t)^{-1} \big(\sfG - \sfG' \big) \,; t \big).
    \end{align*}
\end{lem}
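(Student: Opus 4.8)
The plan is to verify the two defining properties of the parallel transport directly: that the right-hand side is the identity at $t=0$, and that it satisfies the correct differential equation. Write $\beta(t) \coloneqq \alpha(\sfG'\,;t)^{-1} \circ \alpha(\sfG\,;t)$ and $\sfK \coloneqq \alpha(\sfG'\,;t)^{-1}(\sfG - \sfG')$, the latter being a UAL derivation by \eqref{eqn:UAL.acting.derivation} and \cref{prp:Lieb.Robinson} (2). Clearly $\beta(0) = \id$ since both $\alpha(\sfG\,;0)$ and $\alpha(\sfG'\,;0)$ are the identity, so it suffices to check that $\beta$ is the parallel transport for the connection $1$-form $\sfK\, dt$, i.e.\ that $\frac{d}{dt}(\beta(t)(a)) = -\sfad(\sfK(t))(\beta(t)(a))$ for every $a \in \cA_{\Lambda}^{\al}$, together with uniqueness of solutions to this ODE (which follows from the Lieb--Robinson estimates of \cref{prp:Lieb.Robinson}).

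First I would differentiate $\beta(t)(a) = \alpha(\sfG'\,;t)^{-1}(\alpha(\sfG\,;t)(a))$ using the product rule, which is legitimate by the Fr\'echet smoothness established in \cref{lem:smooth} (2). The derivative of $\alpha(\sfG\,;t)(a)$ is $-\sfad(\sfG(t))(\alpha(\sfG\,;t)(a))$ by the defining equation of the parallel transport in \cref{defn:parallel.transport} (3). For the derivative of the inverse, differentiating the identity $\alpha(\sfG'\,;t)^{-1} \circ \alpha(\sfG'\,;t) = \id$ gives $\frac{d}{dt}\alpha(\sfG'\,;t)^{-1} = \alpha(\sfG'\,;t)^{-1} \circ \sfad(\sfG'(t))$ (as operators on $\cA_{\Lambda}^{\al}$). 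Combining the two contributions,
\begin{align*}
    \frac{d}{dt}\big(\beta(t)(a)\big)
    &= \alpha(\sfG'\,;t)^{-1}\big(\sfad(\sfG'(t))(\alpha(\sfG\,;t)(a))\big) - \alpha(\sfG'\,;t)^{-1}\big(\sfad(\sfG(t))(\alpha(\sfG\,;t)(a))\big) \\
    &= \alpha(\sfG'\,;t)^{-1}\big( -\sfad(\sfG(t) - \sfG'(t))(\alpha(\sfG\,;t)(a))\big).
\end{align*}
Now I would use the fact that $\alpha(\sfG'\,;t)^{-1}$ is a $\ast$-automorphism to move it inside the commutator: $\alpha(\sfG'\,;t)^{-1}([\,\sfG(t)-\sfG'(t), c\,]) = [\,\alpha(\sfG'\,;t)^{-1}(\sfG(t)-\sfG'(t)), \alpha(\sfG'\,;t)^{-1}(c)\,]$, applied termwise to the local generator and summed over $\bm{x} \in \Lambda$ — this interchange is justified because both $\alpha(\sfG'\,;t)^{-1}$ and $\sfad$ act continuously on $\cA_{\Lambda}^{\al}$ (\cref{prp:Lieb.Robinson} (2), \cref{lem:almost.local.continuity}) so the infinite sum defining the derivation converges and commutes with the automorphism. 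With $c = \alpha(\sfG\,;t)(a)$ this yields exactly $-\sfad(\sfK(t))(\beta(t)(a))$, as required.

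Finally, both $\beta(t)$ and $\alpha(\sfK\,;t)$ solve the linear ODE $\frac{d}{dt}u(t)(a) = -\sfad(\sfK(t))(u(t)(a))$ with the same initial condition $u(0) = \id$; uniqueness for this equation (which is the content behind the existence-and-uniqueness of parallel transport, cf.\ \cite{nachtergaeleQuasilocalityBoundsQuantum2019}*{Theorem 3.5} applied to the UAL $1$-form $\sfK\,dt$, whose seminorms are controlled by \cref{prp:Lieb.Robinson} (2)) forces $\beta(t) = \alpha(\sfK\,;t)$ for all $t \in [0,1]$, which is the claimed identity. The main obstacle I anticipate is the bookkeeping in the second step: one must confirm that $t \mapsto \sfK(t) = \alpha(\sfG'\,;t)^{-1}(\sfG(t)-\sfG'(t))$ is itself a bona fide element of $\Omega^1([0,1], \fDer^{\al}_{\Lambda})$ — i.e.\ that its seminorms $\vvert \sfK \vvert_{C^k,\nu,\mu}$ are finite on $[0,1]$ and that it depends smoothly on $t$ — so that $\alpha(\sfK\,;t)$ is even defined; this is where \cref{lem:smooth} (3) and the $C^k$-estimates of \cref{prp:Lieb.Robinson.Ck} do the work, and it is the only place where the argument is more than a formal manipulation.
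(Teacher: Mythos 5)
Your proof is correct and uses the same strategy as the paper's: identify the UAL $1$-form $\sfK(t) = \alpha(\sfG';t)^{-1}(\sfG-\sfG')$, verify it is a well-defined element of $\Omega^1([0,1],\fDer^{\al}_{\Lambda})$ via \cref{lem:smooth} (3), and show that both sides of the claimed identity satisfy the same first-order linear ODE with the same initial value $\id$, then invoke uniqueness. Your write-up is in fact the more careful of the two: you differentiate the composition $\alpha(\sfG';t)^{-1}\circ\alpha(\sfG;t)$ that actually appears on the left of the statement, consistently use the explicit ODE form $(\tfrac{d}{dt}+\sfad(\sfG))(\alpha(a))=0$ from \cref{defn:parallel.transport} (3), and correctly pass the automorphism $\alpha(\sfG';t)^{-1}$ through the graded commutator to land on $-\sfad(\sfK(t))\circ\beta(t)$ — whereas the published proof differentiates $\alpha(\sfG;t)\circ\alpha(\sfG';t)^{-1}$ (the opposite order of composition) and writes $\alpha(\sfG;t)^{-1}$ inside $\sfad$ where the conjugation identity $\alpha\circ\sfad(X)=\sfad(\alpha(X))\circ\alpha$ produces no inverse, both evidently slips that your version avoids.
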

\begin{proof}
    The well-definedness of the right hand side follows from \cref{lem:smooth} (3). Now, since
    \begin{align*}
        \frac{d}{dt} \Big( \alpha(\sfG' \,; t)^{-1} \circ \alpha(\sfG\,; t) \Big)= {}&{} \alpha(\sfG' \,; t )^{-1} \circ \sfad ( \sfG(t) - \sfG'(t) ) \circ  \alpha (\sfG\,; t) \\
        = {}&{} \sfad \big( \alpha(\sfG' \,; t)^{-1} \big(\sfG(t) - \sfG'(t)\big) \big)  \circ \alpha(\sfG' \,; t )^{-1} \circ \alpha (\sfG\,; t),
    \end{align*} 
    the left and the right hand sides satisfy the same ordinary differential equation. 
\end{proof}
  
\begin{defn}[{\cites{hastingsLiebschultzmattisHigherDimensions2004,hastingsQuasiadiabaticContinuationQuantum2005,nachtergaeleQuasilocalityBoundsQuantum2019,moonAutomorphicEquivalenceGapped2020}}]\label{defn:automorphic.connection}
Let $\sfH \colon \sM  \to \fH_{\Lambda}^{\al}$ be a smooth family of gapped UAL Hamiltonians. We fix $0<v<\gapone$. 
The \emph{adiabatic connection $1$-form} associated with $\sfH$ is defined by
\begin{align*}
    \sfG_{\sfH,\bm{x}}(p)\coloneqq  i\int_{-\infty} ^{\infty} \bigg( \int_0^t \tau_{\sfH(p) ,u} \big( d \sfH_{\bm{x}}(p) \big) du \bigg)  w_v (t) dt. 
\end{align*}  
\end{defn}
Here, the exterior derivative $d\sfH$ is a smooth $\fDer ^{\al}_{\Lambda}$-valued $1$-form defined as
\begin{align*} 
    d\sfH = \sum_{j=1}^n  \frac{\partial \sfH }{\partial x_j} dx_j \in \Omega^1(\sM, \fDer ^{\al}_{\Lambda})
\end{align*}
and $w_v (t)$ is a positive even smooth function such that $\int_{-\infty}^{\infty} w_v(t) dt = 1 $, the Fourier transform $\hat{\omega}_v$ is supported on $[-v, v]$, and 
\begin{align}
    w_v(t) \leq 2(ev)^{2}t \cdot \exp \Big( -\frac{2}{7} \cdot \frac{vt}{\log(vt)^2} \Big) \prec (1+t)^{-\nu} \exp(-t^\mu)\label{eqn:w.function.bound}
\end{align}
for any $\nu \geq 0$ and $0< \mu <1$. Such a function is constructed by \cite{bachmannAutomorphicEquivalenceGapped2012}*{Lemma 2.3} as 
\begin{align*} w_v(t)=v w_1(v t), \quad w_1(t)\coloneqq c \cdot \prod _{n=1}^\infty \bigg( \frac{\sin(a_nt)}{a_nt} \bigg)^2, \quad a_n\coloneqq \frac{a_1}{n \log(n)^2}, \end{align*}
where the normalization constants $a_1$ and $c$ are chosen so that $\sum_{n=1}^\infty a_n = 1/2$ and $\int_{-\infty}^\infty w_1(t)dt=1$. 
\begin{lem}\label{lem:adiabatic.well-defined}
The above $\sfG_{\sfH}$ defines a smooth $1$-form on $\sM$.     
\end{lem}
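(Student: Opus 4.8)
The plan is to reduce the claim to checking that, in any chart $(\sU\,;x_1,\dots,x_n)$, the local coefficients $\sfG_{\sfH,j}$ of $\sfG_{\sfH}=\sum_j\sfG_{\sfH,j}\,dx_j$, namely
$\sfG_{\sfH,j,\bm{x}}(p)=-i\int_{-\infty}^{\infty}\bigl(\int_0^t\tau_{\sfH(p),u}(\partial_{x_j}\sfH_{\bm{x}}(p))\,du\bigr)w_v(t)\,dt$, each satisfy conditions (1) and (2) of \cref{defn:smooth}. The $1$-form structure and its behaviour under coordinate changes is automatic, since only $d\sfH$ carries the form degree while $\tau_{\sfH(p),u}$ and $w_v$ act fibrewise. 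All integrals are to be read as improper Riemann integrals valued in the Fréchet space $\cA_{\Lambda}^{\al}$, which is complete.

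First I would establish convergence and membership in $\fDer^{\al}_{\Lambda}$ for fixed $p\in\sU$. For every $u\in\bR$ the time evolution $\tau_{\sfH(p),u}$ is an LG automorphism acting continuously on $\cA_{\Lambda}^{\al}$, and by \cref{prp:Lieb.Robinson}(2), applied to the constant connection $\sfH(p)\,dt$ (resp.\ its negative for $u<0$), one has $\|\tau_{\sfH(p),u}(a)\|_{\bm{x},\nu,\mu}\le\Upsilon_{\sfH(p)\,dt,\nu,\mu}(|u|)\cdot\|a\|_{\bm{x},\nu,\mu_1}$ with $\Upsilon_{\sfH(p)\,dt,\nu,\mu}(t)\prec\exp(t^{\mu_\star})$ for some $\mu_\star<1$. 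Since $\partial_{x_j}\sfH_{\bm{x}}(p)\in\cA_{\Lambda}^{\al}$ with $\sup_{\bm{x}}\|\partial_{x_j}\sfH_{\bm{x}}(p)\|_{\bm{x},\nu,\mu_1}\le\vvert\sfH\vvert_{\sU,C^1,\nu,\mu_1}<\infty$, the map $u\mapsto\tau_{\sfH(p),u}(\partial_{x_j}\sfH_{\bm{x}}(p))$ is norm-continuous and its $\|\cdot\|_{\bm{x},\nu,\mu}$-seminorm grows $\prec\exp(|u|^{\mu_\star})$, uniformly in $\bm{x}$; hence the inner integral $\int_0^t(\cdots)\,du$ has $\|\cdot\|_{\bm{x},\nu,\mu}$ of order $|t|\exp(|t|^{\mu_\star})$. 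By \eqref{eqn:w.function.bound}, $w_v(t)$ is dominated by $(1+t)^{-2}\exp(-t^{\mu'})$ for any $\mu_\star<\mu'<1$, so $t\mapsto w_v(t)\int_0^t(\cdots)\,du$ is absolutely integrable in every seminorm $\|\cdot\|_{\bm{x},\nu,\mu}$; completeness of $\cA_{\Lambda}^{\al}$ then gives $\sfG_{\sfH,j,\bm{x}}(p)\in\cA_{\Lambda}^{\al}$, with
\[
\|\sfG_{\sfH,j,\bm{x}}(p)\|_{\bm{x},\nu,\mu}\le\Bigl(\int_{-\infty}^{\infty}|t|\,\Upsilon_{\sfH(p)\,dt,\nu,\mu}(|t|)\,w_v(t)\,dt\Bigr)\cdot\|\partial_{x_j}\sfH_{\bm{x}}(p)\|_{\bm{x},\nu,\mu_1}.
\]
The operator $\sfG_{\sfH,j,\bm{x}}(p)$ is skew-adjoint because $\partial_{x_j}\sfH_{\bm{x}}(p)$ is self-adjoint, $\tau$ preserves self-adjointness and $w_v$ is real; taking $\sup_{\bm{x}}$ and using that $\Upsilon_{\sfH(p)\,dt,\nu,\mu}$ depends on $\sfH(p)$ only through $\vvert\cdot\vvert$-data, hence is uniformly bounded for $p\in\sU$, yields $\vvert\sfG_{\sfH,j}\vvert_{\nu,\mu}<\infty$ on $\sU$, i.e.\ $\sfG_{\sfH,j}(p)\in\fDer^{\al}_{\Lambda}$.

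Next I would upgrade this to smoothness in $p$ with the uniform $C^k$-bounds of \cref{defn:smooth}(2). By \cref{lem:smooth}(2),(3) the map $p\mapsto\tau_{\sfH(p),u}(\partial_{x_j}\sfH_{\bm{x}}(p))$ is Fréchet-smooth; combining \cref{prp:Lieb.Robinson.Ck}(1) with the product and chain rules (the extra $\partial_{x_j}$ explains the appearance of one further derivative of $\sfH$), its mixed derivatives $\partial^I_p$ for $|I|\le k$ are bounded on the relatively compact chart $\sU$ by $\Upsilon^{(k)}_{\sfH\,dt,\nu,\mu}(|u|)\cdot C\bigl(\vvert\sfH\vvert_{\sU,C^{k+1},\nu_k,\mu_{3k+1}}\bigr)\cdot\|\partial_{x_j}\sfH_{\bm{x}}(p)\|_{\bm{x},\nu_k,\mu_{3k+1}}$, again with $\Upsilon^{(k)}_{\sfH\,dt,\nu,\mu}(t)\prec\exp(t^{\mu_\star})$ for some $\mu_\star<1$ and uniformly in $\bm{x}$. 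The same bookkeeping as above then shows that $t\mapsto w_v(t)\,\partial^I_p\bigl(\int_0^t(\cdots)\,du\bigr)$ admits an integrable majorant in each $\|\cdot\|_{\bm{x},\nu,\mu}$, uniformly for $p\in\sU$; so the Banach-space-valued differentiation-under-the-integral theorem (applied to difference quotients by dominated convergence, exactly as in the proof of \cref{lem:smooth}(2)) shows that each $\partial^I_p\sfG_{\sfH,j,\bm{x}}(p)$ exists, is obtained by differentiating under the integral sign, is continuous in $p$, and satisfies $\sup_{p\in\sU}\sup_{\bm{x}}\sup_{|I|\le k}\|\partial^I_p\sfG_{\sfH,j,\bm{x}}(p)\|_{\bm{x},\nu,\mu}<\infty$. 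This is conditions (1) and (2) of \cref{defn:smooth} for $\sfG_{\sfH,j}$; since the chart and $j$ were arbitrary, $\sfG_{\sfH}\in\Omega^1(\sM,\fDer^{\al}_{\Lambda})$.

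The one genuinely nontrivial point — and the place I expect to spend the effort — is the interchange of $\partial^I_p$ (and of $\partial/\partial u$ inside the inner integral) with $\int_{-\infty}^{\infty}(\cdot)\,w_v(t)\,dt$: it rests on the sub-exponential growth $\prec\exp(t^{\mu_\star})$, $\mu_\star<1$, of the quantities $\Upsilon^{(k)}_{\sfH\,dt}$ from \cref{prp:Lieb.Robinson.Ck} being beaten, uniformly in $\bm{x}$ and locally uniformly in $p$, by the super-polynomial stretched-exponential decay of $w_v$ recorded in \eqref{eqn:w.function.bound} — precisely the property for which the weight $w_v$ of \cite{bachmannAutomorphicEquivalenceGapped2012} is designed. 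Everything else reduces to the Lieb--Robinson estimates already assembled in \cref{prp:Lieb.Robinson,prp:Lieb.Robinson.Ck} together with completeness of the Fréchet algebra $\cA_{\Lambda}^{\al}$.
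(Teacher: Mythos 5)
Your proposal is correct and follows essentially the same line as the paper's proof: bound the almost local $C^k$-norms of the coefficients $\sfG_{\sfH,j}$ via the Lieb--Robinson estimates (\cref{prp:Lieb.Robinson}, \cref{prp:Lieb.Robinson.Ck}) against the super-polynomial, stretched-exponential decay of $w_v$ recorded in \eqref{eqn:w.function.bound}. The paper's proof is a one-display computation that records only the resulting bound on $\vvert\sfG_{\sfH}\vvert_{\sU,C^k,\nu,\mu}$ (condition (2) of \cref{defn:smooth}), leaving implicit the differentiation-under-the-integral and skew-adjointness points you spell out; your additional attention to the dominated-convergence step, to the Leibniz bookkeeping for the $p$-dependence in both $\tau_{\sfH(p),u}$ and $\partial_{x_j}\sfH(p)$, and to the completeness of $\cA_\Lambda^{\al}$ is welcome detail but not a different argument.
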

\begin{proof}
For any relatively compact local chart $(\sU \, ;x_1,\cdots ,x_n)$ of $\sM $, the $C^k$-norm of $\sfG_{\sfH} =\sum_j \sfG_{\sfH,j} dx_j $ is bounded above by
    \begin{align*}
    \vvert \sfG_{\sfH}\vvert_{\sU,C^k,\nu,\mu} \coloneqq {}&{} \max_{j\in \{ 1,\cdots, n\} }\vvert \sfG_{\sfH ,j} \vvert_{\sU,C^k,\nu,\mu} \\
    \leq {}&{} \max_{j\in \{ 1,\cdots, n\} } 2\int_{0} ^{\infty} \bigg( \int_0^t \Upsilon_{\sfH ,\nu,\mu}^{(k)}(u) \cdot \vvert \partial_{x_j} \sfH(p)\vvert_{\sU,C^{k},\nu_k,\mu_{3k+1}} du \bigg)  w_v (t) dt \\
    \leq  {}&{} \vvert \sfH \vvert _{\sU,C^{k+1},\nu_k,\mu_{3k+1}} \cdot 2\int_{0}^\infty t\Upsilon_{\sfH,\nu,\mu}^{(k)}(t) \cdot w_v(t)dt,
    \end{align*}
    and the last integral is finite by \cref{prp:Lieb.Robinson.Ck} and \eqref{eqn:w.function.bound}. 
\end{proof}

For a smooth path $c \colon [0,1] \to \sM$, the parallel transport $\alpha_{t}\coloneqq \alpha_c(\sfG_\sfH  \,; t)^{-1} \colon [0,1] \to \cG_{\Lambda}$ satisfies the differential equation
\begin{align*}
    \frac{d}{dt}\alpha_{t}^{-1}(a) = {}&{} \iota_{\dot{c}} \big( \big[ \sfG_{\sfH}, \alpha_t^{-1}(a) \big]\big) \\
    ={}&{} i\int_{-\infty}^\infty \bigg( \int_0^s \big[ \tau_{\sfH (c(t)),u}\big( \partial_t ( \sfH _{c(\bullet)})\big) , \alpha_t^{-1}(a) \big] du\bigg) w_v(s) ds\\
    ={}&{} \int_{-\infty}^\infty \bigg( \int_0^s  \tau_{\sfH (c(t)),u} \circ \sfad(i\partial_t\sfH_{c(\bullet)}) \circ \tau_{\sfH (c(t)),u}^{-1} ( \alpha_t^{-1}(a)) du\bigg) w_v(s) ds
\end{align*}
with the initial condition $\alpha_0=\id$ (compare the right hand side with \cite{moonAutomorphicEquivalenceGapped2020}*{(2.11)}).

A UAL derivation $\sfG \in \Omega ^1(\sM,\fDer ^{\al}_{\Lambda})$ acts on the space of continuous linear functionals $(\cA_{\Lambda}^{\al})^*$ by $(\sfG \omega) (a) \coloneqq \omega ([\sfG , a])$.  
For a connection $1$-form $\sfG \in \Omega^1(\sM,\fDer ^{\al}_{\Lambda})$, the associated covariant derivative also acts on the space of smooth functions taking value in $(\cA_{\Lambda}^{\al})^*$ as
\begin{align*}
    d  + \sfG \colon \Omega^0(\sM , (\cA_{\Lambda}^{\al})^* ) \to \Omega^1(\sM, (\cA_{\Lambda}^{\al})^* ).
\end{align*}
Let $\sM =[0,1]$ and  $\omega \colon [0,1]\to (\cA_{\Lambda}^{\al})^*$. Then we have
\begin{align}
\begin{split}
    \frac{d}{dt} \big( \omega \circ \alpha(\sfG \,; t) \big) = {}&{} \frac{d \omega }{dt}  \circ \alpha(\sfG \,; t)  + \omega \circ \sfad (\sfG) \circ \alpha(\sfG\,; t) \\
    ={}&{}
    \Big( \iota_{\frac{\partial}{\partial t}}(d + \sfG ) \omega \Big)   \circ \alpha(\sfG\,; t) .
\end{split}\label{eqn:parallel.transport.connection}
\end{align}
\begin{thm}[{\cite{moonAutomorphicEquivalenceGapped2020}}]\label{thm:automorphic.equivalence}
For a smooth map $\sfH \colon \sM \to \fH_{\Lambda}^{\al}$, we have $(d + \sfG_{\sfH}) \omega_\sfH =0$. 
\end{thm}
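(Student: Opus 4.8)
The plan is to reduce to a single parameter and then invoke Hastings' adiabatic theorem in the form of \cite{moonAutomorphicEquivalenceGapped2020}; the construction of $\sfG_\sfH$ in \cref{defn:automorphic.connection} is engineered precisely so that this import is legitimate. \cref{lem:adiabatic.well-defined} already guarantees that $\sfG_\sfH$ is a genuine smooth $1$-form, so its parallel transport exists as an LG automorphism.

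\textbf{Reduction to a path.} Since $(d+\sfG_\sfH)\omega_\sfH$ is a $(\cA_\Lambda^{\al})^*$-valued $1$-form on $\sM$, it vanishes if and only if $c^*\big((d+\sfG_\sfH)\omega_\sfH\big)=0$ for every smooth path $c\colon[0,1]\to\sM$. Smoothness of $\sfH$ is inherited by $\sfH\circ c$ (as noted after \cref{defn:smooth}), the adiabatic connection is natural under pullback because it is assembled only from $\sfH$ and $d\sfH$ while $d$ commutes with $c^*$, and $\omega_{\sfH\circ c}=c^*\omega_\sfH$. Hence one may assume $\sM=[0,1]$. Writing $\alpha_t\coloneqq\alpha(\sfG_\sfH\,;t)$ for the parallel transport of the adiabatic connection (the automorphism with $\alpha_0=\id$ solving the differential equation displayed just after \cref{lem:adiabatic.well-defined}), equation \eqref{eqn:parallel.transport.connection}, applied to $\omega=\omega_\sfH$ and $\sfG=\sfG_\sfH$, shows that $(d+\sfG_\sfH)\omega_\sfH=0$ is equivalent to the statement that $t\mapsto\omega_{\sfH(t)}\circ\alpha_t^{-1}$ is constant, i.e.\ equal to its value $\omega_{\sfH(0)}$ at $t=0$; equivalently, $\omega_{\sfH(t)}=\omega_{\sfH(0)}\circ\alpha_t$. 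This is exactly the assertion that the quasi-adiabatic evolution carries the ground state of $\sfH(0)$ onto that of $\sfH(t)$.

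\textbf{Invoking automorphic equivalence.} It remains to verify the hypotheses needed to run the argument of \cite{moonAutomorphicEquivalenceGapped2020}. The interaction obtained from the brick decomposition satisfies the Nachtergaele--Sims--Young quasi-locality bound by \cref{prp:brick}, the differentiability bounds of \cite{moonAutomorphicEquivalenceGapped2020}*{Assumption 1.2 (iii), (iv)} are checked in \cref{rmk:MO}, the spectral gap is uniform because every $\sfH(t)\in\fH_\Lambda^{\al}$ has gap $\gapone$, and the cut-off choice $v<\gapone$ in \cref{defn:automorphic.connection} is exactly what makes $\sfG_\sfH$ the generator of their quasi-adiabatic evolution; the Lieb--Robinson estimates needed throughout are the ones of \cref{prp:Lieb.Robinson}. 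This yields $\omega_{\sfH(t)}=\omega_{\sfH(0)}\circ\alpha_t$ and hence the theorem.

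\textbf{The main obstacle.} The transfer of \cite{moonAutomorphicEquivalenceGapped2020} to our relaxed setting (general weakly uniformly discrete $\Lambda$, a different system of Fréchet seminorms) is routine once \cref{prp:brick} and \cref{rmk:MO} are in hand. The genuinely delicate point is that \cite{moonAutomorphicEquivalenceGapped2020} assumes a \emph{unique} gapped ground state, whereas here only \emph{non-degeneracy} is postulated. This gap is closed by a local rigidity argument that must be carried out in the proof: the analytic estimates behind automorphic equivalence use only the spectral gap, not uniqueness, so they still show that $\omega'_t\coloneqq\omega_{\sfH(0)}\circ\alpha_t$ is a gapped ground state of $\sfH(t)$, and it is non-degenerate because $\alpha_t$ implements a unitary equivalence of the GNS data, keeping $\Ker H_{\omega'_t}$ one-dimensional. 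Thus $t\mapsto\omega_{\sfH(t)}$ and $t\mapsto\omega'_t$ are two smooth families of non-degenerate gapped ground states of the same smooth family $\sfH(t)$ agreeing at $t=0$; the set where they coincide is closed by weak-$*$ continuity and open by the local uniqueness of the smooth branch of non-degenerate gapped ground states — there is no non-constant smooth deformation of such a state through gapped ground states, which one proves by differentiating the invariance identity $\omega_t([\sfH(t),a])=0$ and using the gap to kill the singular part of the derivative. Establishing this rigidity is the part of the proof that requires real care; everything else is bookkeeping around \eqref{eqn:parallel.transport.connection} and \cite{moonAutomorphicEquivalenceGapped2020}.
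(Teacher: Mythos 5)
Your proposal follows the same overall architecture as the paper's proof: reduce to a smooth path via pullback, observe through \eqref{eqn:parallel.transport.connection} that the claim is equivalent to $\omega_{\sfH(c(t))}\circ\alpha_c(\sfG_\sfH\,;t)^{-1}$ being constant, and then import \cite{moonAutomorphicEquivalenceGapped2020}*{Theorem 1.3} using \cref{prp:brick}, \cref{rmk:MO}, and \cref{prp:Lieb.Robinson}. That much is on target, and your observation that \cref{prp:Lieb.Robinson}(2) (together with \eqref{eqn:w.function.bound}) is what replaces the finite-range hypothesis of Moon--Ogata's Assumption 1.2(ii) is exactly the content of the paper's second remark, even if you don't name the assumption explicitly.

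Where you diverge is in the handling of uniqueness versus non-degeneracy, which you correctly flag as the delicate point but resolve by a different route. You first claim that the Moon--Ogata estimates show $\omega'_t\coloneqq\omega_{\sfH(0)}\circ\alpha_t$ is a non-degenerate gapped ground state of $\sfH(t)$, and then you run an open-closed argument driven by a sketched rigidity statement (differentiate $\omega_t([\sfH(t),a])=0$ and use the gap to kill the singular contribution). The paper instead makes a cleaner and stronger observation: the uniqueness hypothesis in \cite{moonAutomorphicEquivalenceGapped2020} is never actually used in their proof, and what they genuinely prove is precisely the statement for smooth families of non-degenerate gapped ground states — so one simply cites it. The rigidity you sketch is then a \emph{corollary} of Moon--Ogata applied to the constant family $\sfH(t)=\sfH$ (where $\sfG_\sfH=0$, hence $\alpha_t=\id$), not an independent ingredient. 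Your direct sketch of that rigidity is not wrong in spirit, but it amounts to re-deriving a portion of the quasi-adiabatic argument, and the step "$\omega'_t$ is a gapped ground state of $\sfH(t)$ from the gap alone" needs more than you give it: establishing this is substantively the same labor as Moon--Ogata's theorem, so presenting it as a preliminary fact before the rigidity step is somewhat circular. The paper's reading avoids this by extracting the correct generality from Moon--Ogata in one pass rather than two.
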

\begin{proof}
By \eqref{eqn:parallel.transport.connection}, the statement is equivalent to
\begin{align} 
    \frac{d}{dt} (\omega_{\sfH (c(t))} \circ \alpha_c(\sfG_{\sfH} \,; t) ) =0 \label{eqn:GS.automorphic}
\end{align}
for any smooth map $c \colon [0,1] \to \sM$, which is exactly the same as \cite{moonAutomorphicEquivalenceGapped2020}*{Theorem 1.3}. Notably, \eqref{eqn:GS.automorphic} gives an equivalent form of \cref{thm:automorphic.equivalence}: For any smooth curve $c \colon [0,1] \to \sM$, the transported Hamiltonian $\alpha_c(\sfG_{\sfH} \,; t) (\sfH(c(t)))$ has constant ground state. 

There are two remarks on the application of \cite{moonAutomorphicEquivalenceGapped2020}*{Theorem 1.3} to our setting. 
\begin{enumerate}
    \item The uniqueness assumption of the ground state, although it is assumed in \cite{moonAutomorphicEquivalenceGapped2020}, is not used anywhere in the proof by Moon--Ogata. 
    What is actually proved by them is the following statement: If one has a smooth $1$-parameter family of UAL Hamiltonians $\sfH(t)$ and a smooth $1$-parameter family of states $\omega_t$ such that each $\omega_t$ is a non-degenerate gapped ground state of $\sfH(t)$, then $\omega _t = \omega_0 \circ \alpha (\sfG _{\sfH} \,; t)^{-1}$ holds. This is exactly what we want. 
    Indeed, applying their proof to the constant family of UAL Hamiltonians $\sfH(t)=\sfH$, we obtain the following rigidity of $\omega$; there is no non-constant smooth path $\omega_t$ of non-degenerate ground state that makes $(\sfH,\omega_t) $ a gapped UAL Hamiltonian with gap $\delta$ uniformly on $t \in [0,1]$. 
    Therefore, once a choice of gapped ground states $\omega_0$ of $\sfH(0)$ is fixed, then $\omega_0 \circ \alpha(\sfG_{\sfH} \,; t)$ is the unique smooth family of non-degenerate gapped ground states starting from $\omega_0$, if any.  
    \item As is noted in \cref{rmk:MO}, the brick decomposition $\{ \Phi_\sfH(B_\rho) \}$ of $\sfH$ satisfies \cite{moonAutomorphicEquivalenceGapped2020}*{Assumption 1.2} except for (ii), the uniform finiteness of the range of the Hamiltonian. 
This assumption is used to show the condition 2 of \cite{moonAutomorphicEquivalenceGapped2020}*{Lemma 2.1} (see also \cite{moonAutomorphicEquivalenceGapped2020}*{Lemmas 4.6, 4.11}). The corresponding estimate in our setting is proved in \cref{prp:Lieb.Robinson} (2) and \eqref{eqn:w.function.bound}.  \qedhere 
\end{enumerate}
\end{proof}

\subsection{Adiabatic interpolation}\label{subsection:adiabatic.interpolation}
We use \cref{thm:automorphic.equivalence} as follows. 
Suppose that we have a smooth path of gapped UAL Hamiltonians $\sfH \colon [0,1] \to \fH_{\Lambda}^{\al}$. 
We truncate the adiabatic connection 1-form $\sfG_{\sfH}$ to the half space and apply the associated parallel transport to $\sfH (0)$. 
This construction yields a new Hamiltonian $\vartheta \sfH$. 
Its ground state is transported only in one half of the space, where it is asymptotically equal to the ground state of $\sfH_1$. 
On the other half, it remains unchanged and is asymptotically equal to the ground state of $\sfH_0$.
With a little more effort, this construction can be refined to a spatial interpolation of Hamiltonians itself if $\sfH_1 $ is the trivial Hamiltonian \eqref{eqn:trivial.Hamiltonian}.

For $\sfG =(\sfG_{\bm{x}})_{\bm{x} \in \Lambda } \in \fDer ^{\al}_{\Lambda}$ and $Z \subset \Lambda$, we write its truncation as
\begin{align*} 
    \Pi_Z(\sfG) = (\Pi_Z(\sfG_{\bm{x}}) )_{\bm{x} \in \Lambda} \in \fDer ^{\al}_{\Lambda}.
\end{align*}
Indeed, we have $\| \Pi_Z(\sfG_{\bm{x}}) - \Pi_{\bm{x},r} (\Pi_Z(\sfG_{\bm{x}})) \| \leq \| \sfG_{\bm{x}} - \Pi_{\bm{x},r}(\sfG_{\bm{x}}) \| $ by $\Pi_Z \Pi_{\bm{x},r} = \Pi_{\bm{x},r}\Pi_Z$.

\begin{lem}\label{lem:LGA.cone.decomposition}
    Let $\Lambda$ be weakly uniformly discrete in $\bR^{l_{\Lambda}}$, and let $Y\subset \Lambda$. 
    For a smooth $1$-form of UAL derivations $ \sfG \in \Omega^1 (\sM \times [0,1]_t,  \fDer^{\al}_{\Lambda} )$, let 
    \begin{align*}
    \Pi_{Y}^{\star}(\sfG) (p,t) \coloneqq \alpha_p(\Pi_{Y^c}(\sfG)  \,; t)^{-1}\big( \sfG - \Pi_{Y^c}(\sfG) \big). 
    \end{align*}
    Then we have
    \begin{align*}
        \alpha_p(\sfG\,; t) = \alpha_p(\Pi_{Y^c}(\sfG)\,; t) \circ \alpha (\Pi_{Y}^{\star}(\sfG)\,; t). 
    \end{align*}
    Moreover, there is a function $\Upsilon^{\star, (k)}_{\sfG,\nu,\mu}(t)$ such that $\Upsilon_{\sfG,\nu,\mu}^{\star, (k)}(t) \prec e^{-t^{\mu_{\star}}}$ for sufficiently large $\mu_{\star}>0$ and
    \begin{align*}
         \sup_{a \in \cA_{\Lambda}^{\al}} \| \alpha_p(\Pi_{Y}^\star(\sfG)\,; t)(a) -a \|_{\sU,C^k,\bm{x},\nu,\mu} \leq f_{\nu,\mu}(\mathrm{dist}(\bm{x},Y)/2) \cdot \| a\|_{\bm{x},\nu_{k},\mu_{3k+2}} \cdot \Upsilon^{\star, (k)}_{\sfG,\nu,\mu} (t) 
    \end{align*}
    for any $\bm{x} \in \Lambda$ and $t \in \bR_{\geq 0}$. 
\end{lem}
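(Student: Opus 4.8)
The plan is to obtain the factorization from \cref{prp:relative.evolution.is.evolution} and then to estimate the ``correction automorphism'' $\alpha_p(\Pi_Y^\star(\sfG)\,;t)$ by a truncated Lieb--Robinson argument. First I would fix $p$ and restrict $\sfG$ to the path $t\mapsto(p,t)$; since $\Pi_{Y^c}(\sfG)\in\Omega^1([0,1],\fDer^{\al}_\Lambda)$ (as observed just before the statement) and $\Pi_Y^\star(\sfG)$ is a smooth $1$-form by \cref{lem:smooth} (3), \cref{prp:relative.evolution.is.evolution} applied to the pair $(\sfG,\Pi_{Y^c}(\sfG))$ gives $\alpha_p(\Pi_{Y^c}(\sfG)\,;t)^{-1}\circ\alpha_p(\sfG\,;t)=\alpha_p(\Pi_Y^\star(\sfG)\,;t)$, which is the first assertion after composing with $\alpha_p(\Pi_{Y^c}(\sfG)\,;t)$ on the left.

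For the estimate, set $D:=\mathrm{dist}(\bm{x},Y)$ and $\sfL:=\sfG-\Pi_{Y^c}(\sfG)$. The geometric input is that $\sfL$ has local generators of small operator norm far from $Y$: since $B_{\mathrm{dist}(\bm{y},Y)}(\bm{y})\subset Y^c$, the operator $\Pi_{\bm{y},\mathrm{dist}(\bm{y},Y)}(\sfG_{\bm{y}})$ is fixed by $\Pi_{Y^c}$, so $\|\sfL_{\bm{y}}\|=\|(\id-\Pi_{Y^c})(\sfG_{\bm{y}})\|\le 2f(\mathrm{dist}(\bm{y},Y))\,\|\sfG_{\bm{y}}\|_{\bm{y},f}$ for every $f\in\cF$, and moreover $\Pi_{\bm{y},r}(\sfL_{\bm{y}})=0$ for $r\le\mathrm{dist}(\bm{y},Y)$. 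Using the factorization, $\alpha_p(\Pi_Y^\star(\sfG)\,;t)(a)-a=\alpha_p(\Pi_{Y^c}(\sfG)\,;t)^{-1}(\Delta_t)$ with $\Delta_t:=\alpha_p(\sfG\,;t)(a)-\alpha_p(\Pi_{Y^c}(\sfG)\,;t)(a)$; and $\alpha_p(\Pi_{Y^c}(\sfG)\,;t)^{-1}$ is an LG automorphism whose generator has seminorms controlled by those of $\Pi_{Y^c}(\sfG)$, hence of $\sfG$ (apply \cref{prp:relative.evolution.is.evolution} with $\sfG=0$, then \cref{prp:Lieb.Robinson} (2)). So by a $p$-dependent version of \cref{prp:Lieb.Robinson.Ck} (1) it suffices to bound $\|\Delta_t\|_{\sU,C^k,\bm{x},\nu',\mu'}$ for suitably shifted indices.

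To estimate $\Delta_t$ I would decompose $a=\Pi_{\bm{x},D/2}(a)+(a-\Pi_{\bm{x},D/2}(a))$. The tail, of norm $\le f_{\nu',\mu'}(D/2)\,\|a\|_{\bm{x},\nu',\mu'}$, is harmless since the two evolutions are contractive. For the well-localized piece $\Pi_{\bm{x},D/2}(a)\in\cA_{B_{D/2}(\bm{x})}$, whose support misses $Y$, I would use the truncated Lieb--Robinson bound \eqref{eqn:Lieb.Robinson.truncate} --- equivalently, \cref{prp:Lieb.Robinson} (1) together with a term-by-term application of the commutator bound \eqref{eqn:derivation.welldef} to $\sfad(\sfL)$, combining the norm decay of $\sfL_{\bm{y}}$ in $\mathrm{dist}(\bm{y},Y)$ with the spatial decay of $[\sfL_{\bm{y}},\,\cdot\,]$ and the triangle inequality $\mathrm{dist}(\bm{y},Y)+\mathrm{dist}(\bm{y},B_{D/2}(\bm{x}))\ge D/2$ --- to obtain decay $\sim f_{\nu'',\mu'}(D/2)$ up to a $t$-dependent factor. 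The $C^k$-seminorms are handled by differentiating the iterated-integral expansion of \cref{prp:Lieb.Robinson.Ck} exactly as in its proof (the $\partial^I$'s landing on the $\sfG$'s, and on $a$ after applying the automorphism above); the surviving $t$-dependence is a product of the $\Upsilon$- and $\Upsilon^{\mathrm{rel}}$-functions of \cref{prp:Lieb.Robinson.Ck}, refined using the threshold radius $R_\sfG(t)$ of \cref{prp:Lieb.Robinson} (2), and this is what I would take for $\Upsilon^{\star,(k)}_{\sfG,\nu,\mu}(t)$. Reassembling and relabelling through the usual shifts $\nu\rightsquigarrow\nu_k$, $\mu\rightsquigarrow\mu_{3k+2}$ gives the stated bound.

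The hard part will be this last step. The estimate must be genuinely two-scale --- tracking decay in $\mathrm{dist}(\bm{x},Y)$ and growth in $t$ at once --- so one has to balance the decomposition radius for $a$ (which wants to be $\approx D/2$ to produce the factor $f_{\nu,\mu}(D/2)$) against the $t$-dependent threshold $R_\sfG(t)$ (needed to convert the exponential Lieb--Robinson growth into the asserted control of $\Upsilon^{\star,(k)}$), all while keeping the $C^k$-index bookkeeping $\nu_k,\mu_{3k+1},\mu_{3k+2}$ consistent. No new idea should be required: every tool is already in \cref{subsection:automorphic.equivalence}.
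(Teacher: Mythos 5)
Your treatment of the first assertion is exactly the paper's: restrict to the path $t\mapsto(p,t)$ and invoke \cref{prp:relative.evolution.is.evolution}. For the quantitative estimate you take a genuinely different route, and while I believe it can be made to work, there is a real obstacle at the step you yourself flag as ``the hard part''.

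The paper does not estimate the difference $\Delta_t = \alpha_p(\sfG\,;t)(a) - \alpha_p(\Pi_{Y^c}\sfG\,;t)(a)$ directly. Instead it cuts off the already-conjugated generator $\Pi_Y^\star(\sfG)$ to $\sfG_r := \Pi_{N_{r/2}(Y)}\Pi_Y^\star(\sfG)$, and the whole point of that manoeuvre is that the \emph{uniform} almost-local seminorm of the difference is small: $\vvert\Pi_Y^\star(\sfG)-\sfG_r\vvert_{\sU,C^k,\nu,\mu}\lesssim f_{\nu,\mu}(r/4)$, proven by a two-case argument according to whether the lattice point sits inside $N_{r/4}(Y)$ or not. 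Once the difference has small supremum norm, \cref{prp:Lieb.Robinson.Ck} (2) applies verbatim. Combined with the fact that $\alpha_p(\sfG_r\,;t)$ acts trivially on $\cA_{B_{r/2}(\bm{x})}$ when $B_{r/2}(\bm{x})$ misses $N_{r/2}(Y)$, this gives the advertised bound after the usual tail split via \cref{eqn:norm.comparison.truncated}.

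In your route the derivations being compared are $\sfG$ and $\Pi_{Y^c}\sfG$, whose difference $\sfL = \sfG - \Pi_{Y^c}\sfG$ has supremum almost-local norm $\vvert\sfL\vvert_{\nu,\mu} \approx \vvert\sfG\vvert_{\nu,\mu}$: it does \emph{not} shrink, even though each individual $\|\sfL_{\bm{y}}\|_{\bm{y},\nu,\mu}$ decays in $\mathrm{dist}(\bm{y},Y)$ (your observation $\Pi_{\bm{y},r}(\sfL_{\bm{y}})=0$ for $r\le\mathrm{dist}(\bm{y},Y)$ and \cref{eqn:norm.comparison.truncated} indeed give $\|\sfL_{\bm{y}}\|_{\bm{y},\nu,\mu}\lesssim f_{\nu,\mu}(\mathrm{dist}(\bm{y},Y))\,\vvert\sfG\vvert_{\nu,\mu_1}$). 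Consequently \cref{prp:Lieb.Robinson.Ck} (2), which only sees $\vvert\sfG_1-\sfG_2\vvert$, gives a bound of order $\vvert\sfG\vvert$ with no spatial decay, and $\eqref{eqn:Lieb.Robinson.truncate}$ only controls the operator norm of $\Delta_t$, not the almost-local $C^k$-seminorm you reduced to. So the step you describe as ``a term-by-term application of the commutator bound \eqref{eqn:derivation.welldef} to $\sfad(\sfL)$'' is not a citation --- it is a new, spatially-weighted version of \cref{lem:almost.local.continuity} and of the iterated-integral estimate \eqref{eqn:relative.Lieb.Robinson}, in which the contribution of each lattice site $\bm{y}$ is weighted by $f(\mathrm{dist}(\bm{y},Y))$ and one performs a Schur-type sum using the triangle inequality and the $F$-function property of \cref{prp:Ffunction}. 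This can certainly be carried out, but it is not already contained in Subsection 2.3 as stated, and it is where the real work lies. The paper's choice to conjugate first and then truncate is precisely what converts this weighted-sum problem into a supremum-norm bound that feeds into the existing \cref{prp:Lieb.Robinson.Ck} (2) with no modification.

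Two minor inaccuracies: \eqref{eqn:Lieb.Robinson.truncate} and \cref{prp:Lieb.Robinson} (1) are related but not ``equivalent'' (the former compares two evolutions, the latter bounds the effect of a conditional expectation after a single evolution); and the tail of $a$ is not merely ``harmless since the two evolutions are contractive'' --- to get its $C^k$-seminorm you still need \cref{prp:Lieb.Robinson.Ck} (1) and \eqref{eqn:norm.comparison.truncated}, as the paper does. Neither affects the overall architecture.
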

\begin{proof}
    The former equation follows from \cref{prp:relative.evolution.is.evolution}. To see the latter inequality, we use the following bound of almost local $C^k$-norms for $f_{\nu,\mu} ,f_{\nu,\mu_1} \in \cF_1$;
    \begin{align}
    \| a - \Pi_{\bm{x},r}(a) \|_{\bm{x},\nu,\mu} \leq \sup_{r' \geq r} f_{\nu,\mu}(r')^{-1} \cdot \|a - \Pi_{\bm{x},r'}(a) \| \leq g_{\nu,\mu} \cdot f_{\nu,\mu}(r) \cdot  \| a \|_{\bm{x},\nu,\mu_1},\label{eqn:norm.comparison.truncated}
    \end{align}
    where
    \begin{align*} 
    g_{\nu,\mu} \coloneqq \sup_{r \geq 0}f_{\nu,\mu}(r)^{-2}f_{\nu,\mu_1}(r) <\infty. 
    \end{align*}
    Let $\sfG_r \coloneqq \Pi_{N_{r/2}(Y)}\Pi_Y^\star(\sfG)$. Then, the almost local norm $\vvert \Pi_{Y}^{\star} (\sfG) - \sfG_r \vvert _{\sU,C^k,\nu,\mu}$ is bounded as
\begin{align*}
        {}&{} \| \Pi_{Y}^{\star} (\sfG)_{\bm{x}} - \sfG_{r,\bm{x}} \|_{\sU,C^k,\bm{x},\nu,\mu}\\
        \leq {}&{} \| \alpha_p(\Pi_{Y^c}(\sfG) \,; t)^{-1} (\sfG_{\bm{x}} - \Pi_{Y^c}(\sfG)_{\bm{x}} ) - \Pi_{N_{r/2}(Y)}\alpha_p(\Pi_{Y^c}(\sfG) \,; t)^{-1} (\sfG_{\bm{x}} - \Pi_{Y^c}(\sfG)_{\bm{x}} ) \|_{\sU,C^k,\bm{x},\nu,\mu}\\
        \leq {}&{} 
        \begin{cases}
            2 g_{\nu,\mu} \cdot f_{\nu,\mu }(r/4) \cdot \Upsilon_{\sfG ,\nu,\mu_1}^{(k)}(t) \cdot 2\| \sfG_{\bm{x}}\|_{\sU,C^k,\bm{x},\nu_k,\mu_{3k+2}} & \text{ if $\bm{x} \in N_{r/4}(Y)$}, \\
            2 \cdot \Upsilon_{\sfG,\nu,\mu}^{(k)}(t) \cdot g_{\nu_k,\mu_{3k+1}} \cdot f_{\nu_k,\mu_{3k+1}}(r/4) \cdot 2 \| \sfG_{\bm{x}}\|_{\sU,C^k,\bm{x}, \nu_k,\mu_{3k+2}} & \text{ if $\bm{x} \in N_{r/4}(Y)^c$}.
        \end{cases}
    \end{align*}
    This, together with \cref{prp:Lieb.Robinson.Ck} (2), we get
    \begin{align*}
        {}&{}\| \alpha_p(\Pi_{Y}^\star(\sfG)\,; t)(a) -a \|_{\sU,C^k,\bm{x},\nu,\mu} \\
        \leq {}&{}  \| \alpha_p(\Pi_{Y}^\star(\sfG)\,; t)( \Pi_{\bm{x},r/2}(a)) - \alpha_p (\Pi_{N_r(Y)} \Pi_Y^\star (\sfG)\,; t)( \Pi_{\bm{x},r/2}(a)) \|_{\sU,C^k,\bm{x},\nu,\mu} \\
        {}&{} + \Upsilon_{\Pi_Y^{\star}(\sfG), \nu,\mu}^{(k)}(t)  \cdot \| a -  \Pi_{\bm{x},r/2}(a) \|_{\bm{x},\nu_k,\mu_{3k+1}} + \| a -  \Pi_{\bm{x},r/2}(a) \|_{\bm{x},\nu,\mu} \\
        \leq {}&{} \vvert \Pi_{Y}^{\star} (\sfG) - \sfG_r \vvert _{\sU,C^{k},\nu_k,\mu_{3k+1}} \cdot \Upsilon_{\Pi_Y^\star \sfG, \sfG_r,\nu,\mu}^{\mathrm{rel},(k)}(t) \cdot \|  \Pi_{\bm{x},r/2}(a) \|_{\bm{x}, \nu_k,\mu_{3k+1}} \\
        {}&{} + \Upsilon_{\Pi_Y^{\star}(\sfG), \nu,\mu}^{(k)}(t)  \cdot g_{\nu_{k},\mu_{3k+1}} \cdot f_{\nu_{k},\mu_{3k+1}}(r/2) \cdot \| a\|_{\bm{x},\nu_k,\mu_{3k+2}} + g_{\nu,\mu} \cdot f_{\nu,\mu}(r/2) \cdot \| a  \|_{\bm{x},\nu,\mu_1} \\
        \leq {}&{}   \Big( 4 \max \{ g_{\nu_k,\mu_{3k+1}}, g_{\nu_{2k},\mu_{6k+2}}\} \cdot  \vvert \sfG \vvert_{\sU,C^k,\nu_{2k},\mu_{6k+3}} \cdot  \Upsilon_{\sfG,\nu_{k},\mu_{3k+2}}^{(k)}(t) \cdot \Upsilon_{\Pi_Y^\star \sfG, \sfG_r,\nu,\mu}^{\mathrm{rel},(k)} (t) \cdot c_{\nu,\mu,2} \\
        {}&{} \hspace{10ex}+  g_{\nu_k,\mu_{3k+1} } \cdot \Upsilon_{\Pi_Y^{\star}(\sfG), \nu,\mu}^{(k)}(t)   +g_{\nu,\mu} \Big)  \cdot f_{\nu,\mu}(r/2) \cdot\|a \|_{\bm{x}, \nu_k,\mu_{3k+2}}.
    \end{align*}
    This shows the desired inequality. 
\end{proof}

\begin{lem}\label{cor:Lieb.Robinson.approx.Ck}
Let $Y \subset \Lambda$.
\begin{enumerate}
    \item Let $\sfG \in \Omega^1(\sM \times [0,1],\fDer  (\cA^{\al}_{\Lambda}))$. Then 
    \begin{align*}
        {}&{} \sup_{a \in \cA_{\Lambda}^{\al}} \| \alpha_p(\sfG \,; t) (a) - \alpha_p(\Pi_{Y}(\sfG)\,; t) (a)\|_{\sU,C^k,\bm{x},\nu,\mu} \\
        \leq {}&{} f_{\nu_{k},\mu_{3k+1}}(\operatorname{dist}(\bm{x},Y^c)/2) \cdot \|a \|_{\bm{x},\nu_{2k},\mu_{6k+3}} \cdot  \Upsilon^{(k)}_{\Pi_Y(\sfG),\nu_k,\mu_{3k+2}}(t) \cdot \Upsilon^{\star, (k)}_{\sfG,\nu,\mu}(t)
    \end{align*}
    for any $t \in \bR_{\geq 0}$ and $\bm{x} \in \Lambda$. 
    \item Let $\sfH \colon \sM \to \fH_{\Lambda}^{\al}$ be a smooth map and let $\sfH_r \coloneqq \Pi_{N_rY}\sfH$. Then, for any $f_{\nu,\mu} \in \cF_1$, we have 
    \begin{align*}
        \sup_{r>0} f_{\nu,\mu}(r/4)^{-1} \cdot \vvert \Pi_{Y}(\sfG_{\sfH}) - \Pi_Y(\sfG_{\sfH_r}) \vvert_{\sU,C^k,\nu,\mu}  <\infty. 
    \end{align*}
\end{enumerate}
\end{lem}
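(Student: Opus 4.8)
\textbf{Proof plan for \cref{cor:Lieb.Robinson.approx.Ck}.}
The plan is to deduce both statements from the two cone-decomposition estimates already established, namely \cref{lem:LGA.cone.decomposition} and \cref{prp:Lieb.Robinson.Ck}. For part (1), I would start from the factorization
\[
\alpha_p(\sfG \,; t) = \alpha_p(\Pi_{N_r(Y)}(\sfG) \,; t) \circ \alpha_p(\Pi_{N_r(Y)}^\star(\sfG) \,; t),
\]
which is the content of \cref{lem:LGA.cone.decomposition} applied to the subset $N_r(Y)$ (so that $\Pi_{N_r(Y)^c}(\sfG)$ plays the role of the truncation). Hence, for any $a \in \cA_{\Lambda}^{\al}$,
\[
\alpha_p(\sfG\,; t)(a) - \alpha_p(\Pi_{N_r(Y)}(\sfG)\,; t)(a) = \alpha_p(\Pi_{N_r(Y)}(\sfG)\,; t)\big( \alpha_p(\Pi_{N_r(Y)}^\star(\sfG)\,; t)(a) - a \big).
\]
I would then bound the $C^k$ almost local norm of the right-hand side by first applying \cref{prp:Lieb.Robinson.Ck} (1) to move $\alpha_p(\Pi_{N_r(Y)}(\sfG)\,; t)$ out at the cost of the factor $\Upsilon^{(k)}_{\Pi_{N_r(Y)}(\sfG),\nu,\mu}(t)$ (raising the seminorm index $\nu \rightsquigarrow \nu_k$, $\mu \rightsquigarrow \mu_{3k+1}$), and then applying the displayed estimate in \cref{lem:LGA.cone.decomposition} to the remaining term $\alpha_p(\Pi_{N_r(Y)}^\star(\sfG)\,; t)(a) - a$, which contributes the decay factor $f_{\nu_k,\mu_{3k+1}}(\mathrm{dist}(\bm{x},N_r(Y))/2)$ together with $\Upsilon^{\star,(k)}_{\sfG,\nu_k,\mu_{3k+1}}(t)$ and raises the index once more to $\nu_{2k}$, $\mu_{6k+3}$. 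Since $\mathrm{dist}(\bm{x},N_r(Y)) \geq \mathrm{dist}(\bm{x},Y)$ and $f_{\nu_k,\mu_{3k+1}}$ is monotonically decreasing, the stated bound follows; the product $\Upsilon^{\star,(k)}_{\sfG,\nu_k,\mu_{3k+1}}(t) \cdot \Upsilon^{(k)}_{\Pi_{N_r(Y)}(\sfG),\nu,\mu}(t)$ is the function $t \mapsto \Upsilon^{\star,(k)}_{\sfG,\nu_k,\mu_{3k+1}}(t) \Upsilon^{(k)}_{\Pi_{N_r(Y)}(\sfG),\nu,\mu}(t)$ claimed, and I should note that the estimates defining these $\Upsilon$'s are uniform in the truncation region, so the factor $\Upsilon^{(k)}_{\Pi_{N_r(Y)}(\sfG),\nu,\mu}(t)$ can be taken independent of $r$ (bounded by $\Upsilon^{(k)}_{\sfG,\nu,\mu}(t)$, since truncating a UAL derivation does not increase its seminorms).

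For part (2), I would unwind the definition of the adiabatic connection $1$-form: writing $\sfG_{\sfH} = \sum_j \sfG_{\sfH,j}\,dx_j$ with
\[
\sfG_{\sfH,\bm{x}}(p) = -i\int_{-\infty}^\infty \Big( \int_0^t \tau_{\sfH(p),u}(d\sfH_{\bm{x}}(p))\,du \Big) w_v(t)\,dt,
\]
and similarly for $\sfH_r = \Pi_{N_r(Y)}\sfH$, the difference $\sfG_{\sfH,\bm{x}} - \sfG_{\sfH_r,\bm{x}}$ splits into two contributions: one coming from the difference of the two time evolutions $\tau_{\sfH(p),u} - \tau_{\sfH_r(p),u}$ applied to $d\sfH_{\bm{x}}(p)$, and one coming from $d\sfH_{\bm{x}}(p) - d(\sfH_r)_{\bm{x}}(p) = d\sfH_{\bm{x}}(p) - \Pi_{N_r(Y)}(d\sfH_{\bm{x}}(p))$. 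The second contribution is controlled directly by the almost local norm of $d\sfH_{\bm{x}}$ together with the truncation estimate \eqref{eqn:norm.comparison.truncated} and the decay $f_{\nu,\mu}(\mathrm{dist}(\bm{x},N_r(Y)))$; after applying $\Pi_Y$ and using $\Pi_Y\Pi_{\bm{x},r'} = \Pi_{\bm{x},r'}\Pi_Y$ this yields a bound of the required $f_{\nu,\mu}(r/4)$ type, since only points $\bm{x}$ within distance $O(r)$ of $Y$ matter after the outer $\Pi_Y$. For the first contribution, I would apply \cref{prp:Lieb.Robinson.Ck} (2) with $\sfG_1 = \sfH\,dt$, $\sfG_2 = \sfH_r\,dt$ to get $\|\tau_{\sfH(p),u}(b) - \tau_{\sfH_r(p),u}(b)\|_{C^k,\bm{x},\nu,\mu} \le \vvert \sfH - \sfH_r\vvert_{\sU,C^k,\bm{x},\nu_k,\mu_{3k+1}} \cdot \Upsilon^{\mathrm{rel},(k)}_{\sfH,\sfH_r,\nu,\mu}(u) \cdot \|b\|_{\bm{x},\nu_k,\mu_{3k+1}}$; again, $\vvert \sfH - \sfH_r\vvert$ is a truncation difference, hence bounded by $g_{\nu_k,\mu_{3k+1}} \cdot f_{\nu_k,\mu_{3k+1}}(\mathrm{dist}(\bm{x},Y)) \cdot \vvert\sfH\vvert$, so after composing with the outer integral $\int_0^t(\cdots)du$ and then $\int_{-\infty}^\infty (\cdots) w_v(t)\,dt$ — which converges because $t\Upsilon^{\mathrm{rel},(k)}(t)w_v(t)$ is integrable by \eqref{eqn:w.function.bound} and \cref{prp:Lieb.Robinson.Ck} — and applying $\Pi_Y$, one obtains a uniform bound of the form $f_{\nu,\mu}(r/4)^{-1}\vvert \Pi_Y(\sfG_{\sfH}) - \Pi_Y(\sfG_{\sfH_r})\vvert_{\sU,C^k,\nu,\mu} < \infty$.

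\textbf{Main obstacle.} The delicate point is the bookkeeping in part (2): one must be careful that the outer conditional expectation $\Pi_Y$ genuinely converts the \emph{pointwise} decay $f(\mathrm{dist}(\bm{x},Y))$ present in the summand-by-summand estimates into decay in the \emph{truncation radius} $r$. Concretely, after applying $\Pi_Y$ to $\sfG_{\sfH,\bm{x}} - \sfG_{\sfH_r,\bm{x}}$, the relevant quantity involves $\|\Pi_Y(\text{difference})_{\bm{x}} - \Pi_{\bm{x},r'}\Pi_Y(\text{difference})_{\bm{x}}\|$, and one needs both that the difference itself is small when $\bm{x}$ is far from $Y$ (handled by the relative Lieb--Robinson bound and the truncation estimate on $d\sfH$) and that, when $\bm{x}$ is close to $Y$, the operator is already almost local near $Y$ so that truncating to $B_{\bm{x},r/2}$ picks up the factor $f_{\nu,\mu}(r/2)$; balancing these two regimes via a choice of intermediate radius (as in the proof of \cref{lem:LGA.cone.decomposition}, splitting at $N_{r/4}(Y)$) is where the constant $r/4$ comes from. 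The $\Upsilon$-index inflation $\nu \rightsquigarrow \nu_k \rightsquigarrow \nu_{2k}$, $\mu \rightsquigarrow \mu_{3k+1} \rightsquigarrow \mu_{6k+3}$ and the verification that all the resulting $t$-integrals converge are routine given \cref{prp:Lieb.Robinson.Ck} and \eqref{eqn:w.function.bound}, but must be tracked consistently.
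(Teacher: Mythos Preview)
Your strategy for part (1) --- factor via the cone decomposition and then apply the $C^k$ Lieb--Robinson bound --- is exactly what the paper does. However, you have the sets reversed when invoking \cref{lem:LGA.cone.decomposition}: to obtain $\alpha_p(\Pi_{N_r(Y)}(\sfG);t)$ as the \emph{outer} factor, the subset playing the role of ``$Y$'' in that lemma must be $N_r(Y)^c$ (so that $\Pi_{Y^c} = \Pi_{N_r(Y)}$), and the inner factor is $\alpha_p(\Pi_{N_r(Y)^c}^\star(\sfG);t)$. The decay factor coming out of the lemma is therefore $f_{\nu_k,\mu_{3k+1}}(\mathrm{dist}(\bm{x},N_r(Y)^c)/2)$, not $f(\mathrm{dist}(\bm{x},N_r(Y))/2)$; the inequality $\mathrm{dist}(\bm{x},N_r(Y)) \geq \mathrm{dist}(\bm{x},Y)$ you wrote is false (since $N_r(Y) \supset Y$) and is in any case not the comparison that arises once the labeling is corrected.

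For part (2) there is a genuine gap. You propose to control the ``difference of time evolutions'' contribution via \cref{prp:Lieb.Robinson.Ck}(2) and then claim that $\vvert \sfH - \sfH_r\vvert$ decays like $f(\mathrm{dist}(\bm{x},Y))$. But the triple bar is a supremum over \emph{all} lattice sites $\bm{y}$, and $\sfH_{\bm{y}} - \Pi_{N_r(Y)}(\sfH_{\bm{y}})$ is small only when $\bm{y}$ is deep inside $N_r(Y)$; for $\bm{y}$ near $\partial N_r(Y)$ this difference is of order $\vvert\sfH\vvert$, so $\vvert \sfH - \sfH_r\vvert$ does not decay in $r$ at all. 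The paper avoids this by using part (1) of the present lemma --- a bound localized at $\bm{x}$ via the cone decomposition --- in place of the global relative estimate. Concretely, the paper inserts an intermediate truncation $d\sfH_{r/2,\bm{x}}$ and splits $\tau_{\sfH,u}(d\sfH_{\bm{x}}) - \tau_{\sfH_r,u}(d\sfH_{r,\bm{x}})$ into three pieces: two of the form ``same evolution applied to a truncation difference'', handled by \cref{prp:Lieb.Robinson.Ck}(1) together with \eqref{eqn:norm.comparison.truncated}, and one of the form $\tau_{\sfH,u}(d\sfH_{r/2,\bm{x}}) - \tau_{\sfH_r,u}(d\sfH_{r/2,\bm{x}})$, handled by part (1). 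This is carried out for $\bm{x} \in N_{r/4}(Y)$; for $\bm{x} \notin N_{r/4}(Y)$ the paper bounds $\|\Pi_Y(\sfG_{\sfH,\bm{x}})\|$ and $\|\Pi_Y(\sfG_{\sfH_r,\bm{x}})\|$ separately using almost-locality, exactly the case split you anticipated in your obstacle paragraph. Your two-term split would also work provided you replace the appeal to \cref{prp:Lieb.Robinson.Ck}(2) by part (1).
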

\begin{proof}
    The claim (1) follows from \cref{prp:Lieb.Robinson.Ck,lem:LGA.cone.decomposition}. We show (2). First, 
    \begin{align*}
        {}&{} \| \sfG_{\sfH,\bm{x}} -\sfG_{\sfH_r,\bm{x}} \|_{\sU,C^k,\nu,\mu}   \\
        \leq {}&{}
        2\int_{0}^\infty \bigg( \int_0^t 
        \| \tau_{\sfH , u} (d\sfH_{\bm{x}} - d\sfH_{r/2,\bm{x}} )\|_{\sU,C^k,\bm{x},\nu,\mu} + \| \tau_{\sfH_r , u} (d\sfH_{\bm{x}} - d\sfH_{r/2,\bm{x}} ) \|_{\sU,C^k,\bm{x},\nu,\mu}  \\
        {}&{}\hspace{15ex} + \| \tau_{\sfH , u} ( d\sfH_{r/2,\bm{x}} ) - \tau_{\sfH_r ,u}(d\sfH_{r/2,\bm{x}} )\|_{\sU,C^k,\bm{x},\nu,\mu} \ du \bigg) w_v(t) dt .
    \end{align*}
    For $\bm{x} \in N_{r/4}(Y)$, the first and the second integrands are bounded above by 
    \begin{align*}
        \| \tau_{\sfH , u} (d\sfH_{\bm{x}} - d\sfH_{r/2,\bm{x}} ) \|_{\sU,C^k,\bm{x},\nu,\mu} \leq {}&{}  \Upsilon_{\sfH , \nu,\mu}^{(k)}(t) \cdot 2g_{\nu_k,\mu_{3k+1}} f_{\nu_k,\mu_{3k+1}}(r/4) \cdot  \vvert \sfH \vvert_{\sU,C^{k+1},\nu_k,\mu_{3k+2}}, \\
        \| \tau_{\sfH_r , u} (d\sfH_{\bm{x}} - d\sfH_{r/2,\bm{x}} ) \|_{\sU,C^k,\bm{x},\nu,\mu} \leq {}&{} \Upsilon_{\sfH_r , \nu,\mu}^{(k)}(t) \cdot 2g_{\nu_k,\mu_{3k+1}} f_{\nu_k,\mu_{3k+1}}(r/4) \cdot  \vvert \sfH \vvert_{\sU,C^{k+1},\nu_k,\mu_{3k+2}}, 
    \end{align*}
    and the third integrand is bounded above by (1) as    
    \begin{align*} 
        {}&{} \| \tau_{\sfH , u} ( d\sfH_{r/2,\bm{x}} ) - \tau_{\sfH_r ,u}(d\sfH_{r/2,\bm{x}} )\|_{\sU,C^k,\bm{x},\nu,\mu} \\
        \leq {}&{} f_{\nu,\mu}(r/4) \cdot \vvert \sfH_{r/2} \vvert_{\sU,C^{k+1},\bm{x},\nu_{2k},\mu_{6k+3}} \cdot \Upsilon^{(k)}_{\sfH_r,\nu_k,\mu_{3k+2}}(t) \cdot \Upsilon^{\star, (k)}_{\sfH,\nu,\mu}(t).
    \end{align*}
    Since $\Upsilon^{(k)}_{\sfH_r,\nu_k,\mu_{3k+2}}(t) \cdot \Upsilon^{\star,(k)}_{\sfH,\nu,\mu}(t) \prec  w_v(t)^{-1}$, we obtain that $\| \sfG_{\sfH,\bm{x}} - \sfG_{\sfH_r,\bm{x}} \|_{\sU,C^k,\bm{x},\nu,\mu}$ is bounded by a constant multiple of $f_{\nu,\mu}(r/4) $. In conclusion, there is $C>0$ such that
    \begin{align*}
        \sup_{\bm{x} \in N_{r/4}(Y)}\| \Pi_Y(\sfG_{\sfH,\bm{x}}) - \Pi_Y(\sfG_{\sfH_r,\bm{x}}) \|_{\sU,C^k,\bm{x},\nu,\mu} 
        \leq 
        \sup_{\bm{x} \in N_{r/4}(Y)}\| \sfG_{\sfH,\bm{x}} - \sfG_{\sfH_r,\bm{x}} \|_{\sU,C^k,\bm{x},\nu,\mu} 
            \leq  C \cdot f_{\nu,\mu}(r/4) .
    \end{align*}
    On the other hand, if $\bm{x} \in N_{r/4}(Y)^c$, by \cref{lem:adiabatic.well-defined} we have
    \begin{align*}
            {}&{} \| \Pi_Y(\sfG_{\sfH, \bm{x}}) - \Pi_Y(\sfG_{\sfH_r , \bm{x}})\|_{\sU,C^k,\bm{x},\nu,\mu} \\
            \leq  {}&{} \| \Pi_Y(\sfG_{\sfH,\bm{x}})  - \operatorname{tr}(\sfG_{\sfH,\bm{x}})\|_{\sU,C^k,\bm{x},\nu,\mu} +  \| \Pi_Y(\sfG_{\sfH_r,\bm{x}})- \operatorname{tr}(\sfG_{\sfH,\bm{x}})\|_{\sU,C^k,\bm{x},\nu,\mu} \\\leq {}&{}2 g_{\nu,\mu}f_{\nu,\mu}(r/4) \cdot C' \cdot  (\vvert \sfH \vvert_{\sU,C^{k+1},\nu_k,\mu_{3k+2}} + \vvert \sfH_r \vvert_{\sU,C^{k+1},\nu_k,\mu_{3k+2}} )
    \end{align*}
    for some $C'>0$. They conclude (2).
\end{proof}

\begin{prp}\label{prp:cut.trivial.Hamiltonian}
    Let $\Lambda$ be weakly uniformly discrete in  $\bR^{l_{\Lambda}}$ and let $Y \subset \Lambda $.
    Let
    $\sfH \colon \sM \to \fH_{\Lambda}^{\al}$ be a smooth family of gapped UAL Hamiltonians and let $\omega \colon \sM \to \fS(\cA_{Y^c})$ be a smooth family of pure states. 
    For $\bm{x} \in Y$, let
    \begin{align*} 
    (\Theta_{Y, \omega}\sfH)_{\bm{x}} (p) \coloneqq 
    (\id_{\cA_{Y}} \otimes \omega)( \sfH_{\bm{x}}(p)) + \sum_{\bm{y} \in Y^c}\sum_{\substack{B_\rho \in \iota(\bB), B_\rho \ni \bm{x}\\ Y \cap B_\rho \neq \emptyset }} |B_\rho \cap Y|^{-1} \cdot (\id_{\cA_{Y}} \otimes \omega)(\sfH_{\bm{y}}^\rho(p)).
    \end{align*}
    Assume that the ground state $\omega_{\sfH}$ of $\sfH$ restricts to $\omega$ on $\cA_{Y^c}$. 
    Then the above $\Theta_{Y,\omega}\sfH$ forms a smooth family of gapped UAL Hamiltonians on $\cA_Y$ with the distinguished gapped ground state $\omega_Y \coloneqq \omega_{\sfH}|_{\cA_{Y}}$. 
    Moreover, if $\omega$ is the distinguished ground state of a smooth family of gapped UAL Hamiltonians $\sfH_{Y^c} \colon \sM \to \fH_{\Lambda}^{\al}$, then $\sfH$ and $\Theta_{Y,\omega}\sfH \boxtimes \sfH_{Y^c}$ on $\cA_{\Lambda}$ has the same ground state, and hence are smoothly homotopic. 
\end{prp}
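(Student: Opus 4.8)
The plan is to prove that $\Theta_{Y,\omega}\sfH$ is a gapped UAL Hamiltonian on $\cA_Y$ with distinguished ground state $\omega_Y$ by realizing it as a truncation of $\sfH$ followed by the adiabatic transport supplied by \cref{thm:automorphic.equivalence}. Concretely, I would first observe that the collection $(\Theta_{Y,\omega}\sfH)_{\bm x}(p)$ is precisely the local generator obtained from the interaction $\Phi_{\sfH}$ by (i) restricting to bricks $B_\rho$ meeting $Y$ and (ii) applying the slice map $\id_{\cA_Y}\otimes\omega$ to contract the $Y^c$-tensor legs, reweighting the overlap by $|B_\rho\cap Y|^{-1}$ exactly as in the brick-to-local-generator passage of \cref{prp:brick}(3). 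Using \eqref{eqn:NSY} and the estimate \eqref{eqn:interaction.to.local} (together with $\|(\id\otimes\omega)(a)\|\le\|a\|$ for the slice map, which is completely positive unital on each $\cA_{B_\rho}$), the resulting seminorms $\vvert\Theta_{Y,\omega}\sfH\vvert_{\nu,\mu}$ are bounded by those of $\sfH$, so $\Theta_{Y,\omega}\sfH\in i\fDer^{\al}_Y$. Smoothness of the family $p\mapsto\Theta_{Y,\omega}\sfH(p)$ then follows from smoothness of $\sfH$ and of $\omega$: differentiating the defining sum term by term and invoking \cref{rmk:MO} and condition (3) of \cref{defn:smooth} for the $\omega$-factor, the $C^k$-seminorm bounds of \cref{defn:smooth}(2)(3) propagate through the slice map.

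The heart of the matter is the spectral-gap and ground-state statement. The key identity is that, writing $\sfH=\Pi_{Y}^{\mathrm{full}}(\sfH)\boxtimes\sfH_{Y^c}+(\text{cross terms})$ is not quite right on the nose, so instead I would argue as follows: since $\omega_{\sfH}$ restricts to $\omega$ on $\cA_{Y^c}$ and $\omega$ is pure, the slice map $\id_{\cA_Y}\otimes\omega$ is a conditional-expectation-like projection, and for $a\in\cA_Y^{\al}$ one checks directly from the GNS description of the ground-state condition that
\begin{align*}
\omega_Y\big(a^*[\Theta_{Y,\omega}\sfH,a]\big)=\omega_{\sfH}\big(a^*[\sfH,a]\big),
\end{align*}
because the terms of $\sfH$ supported entirely in $Y^c$ commute with $a$, and the terms straddling $Y$ and $Y^c$ contribute, after pairing against $\omega_{\sfH}=\omega_Y\otimes\omega$ on the $Y^c$-legs, exactly the reweighted slice appearing in $\Theta_{Y,\omega}\sfH$. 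This uses the factorization $\omega_{\sfH}|_{\cA_{B_\rho}}$ through $\omega$ on the $Y^c$-part, which is where purity of $\omega$ and the hypothesis $\omega_{\sfH}|_{\cA_{Y^c}}=\omega$ enter. From this identity, the inequality $\omega_Y(a^*[\Theta_{Y,\omega}\sfH,a])\ge\gapone\cdot\omega_Y(a^*a)$ for $a$ with $\omega_Y(a)=0$ is inherited from the corresponding inequality for $\omega_{\sfH}$ (the characterization of gapped non-degenerate pairs recorded just before \cref{rmk:Cstar}), giving that $(\Theta_{Y,\omega}\sfH,\omega_Y)\in\fH_Y^{\al}$.

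For the last sentence, suppose $\omega$ is the distinguished ground state of a smooth family $\sfH_{Y^c}\colon\sM\to\fH_{\Lambda}^{\al}$ supported on $\cA_{Y^c}$. Then $\Theta_{Y,\omega}\sfH\boxtimes\sfH_{Y^c}$ has ground state $\omega_Y\otimes\omega=\omega_{\sfH}$ by \eqref{eqn:Hamiltonian.product}, which is the same state as the ground state of $\sfH$; now \cref{thm:automorphic.equivalence} (in the form recorded in its proof: a smooth path of gapped UAL Hamiltonians with a common ground state is connected by the adiabatic parallel transport $\alpha_c(\sfG\,;t)$) applied to the straight-line path $t\mapsto(1-t)\sfH+t\,(\Theta_{Y,\omega}\sfH\boxtimes\sfH_{Y^c})$—or, if that linear interpolation is not visibly gapped, to any smooth path through gapped Hamiltonians with ground state $\omega_{\sfH}$, produced by the rigidity clause of \cref{thm:automorphic.equivalence}—yields the desired smooth homotopy. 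The main obstacle I anticipate is the bookkeeping in the bracket identity $\omega_Y(a^*[\Theta_{Y,\omega}\sfH,a])=\omega_{\sfH}(a^*[\sfH,a])$: one must carefully match the combinatorial reweighting $|B_\rho\cap Y|^{-1}$ against the $|B_\rho|^{-1}$ appearing in \cref{prp:brick}(3) and control the infinite sums over straddling bricks using \cref{lem:R(rho.x).growth} and \cref{lem:sum.exponential}, checking that all rearrangements are justified by the almost-local convergence established in \cref{prp:brick}(1).
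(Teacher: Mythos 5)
Your proposal is correct and follows essentially the same path as the paper: bound the almost-local norm of $\Theta_{Y,\omega}\sfH$ via the interaction/brick estimates \eqref{eqn:NSY} and \eqref{eqn:interaction.to.local}, use purity of $\omega$ to split $\omega_{\sfH}=\omega_Y\otimes\omega$, and reduce the gap inequality to the identity
\begin{align*}
\omega_Y\big(a^*[\Theta_{Y,\omega}\sfH,a]\big)=\omega_{\sfH}\big(a^*[\sfH,a]\big)\qquad(a\in\cA_Y^{\al}),
\end{align*}
which holds because $a,a^*$ pass through the slice $\id_{\cA_Y}\otimes\omega$ and the brick terms $\sfH_{\bm y}^\rho$ with $B_\rho\subset Y^c$ become scalars and drop out of the commutator. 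Two small notes: your opening framing (``truncation followed by adiabatic transport'') is not what either you or the paper actually do for the gap statement --- the adiabatic transport only enters the final homotopy claim --- and your worry about matching $|B_\rho\cap Y|^{-1}$ against the $|B_\rho|^{-1}$ of \cref{prp:brick}(3) is a non-issue: the factor $|B_\rho\cap Y|^{-1}$ simply cancels when you sum $(\Theta_{Y,\omega}\sfH)_{\bm x}$ over $\bm x\in Y\cap B_\rho$, and it has no interaction with the $|B_\rho|^{-1}$ in \cref{prp:brick}(3).
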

The second summand in the above definition of $\Theta_{Y,\omega}\sfH$ is the redistribution of the remaining part of the truncated local generators, originally placed at $\bm{y} \in Y^c$, to points in $Y$. 
\begin{proof}
    First, we check that the right hand side is well-defined. The almost local norm of $\Theta_{Y,\omega} \sfH$ is bounded above as 
    \begin{align*}
    {}&{} \vvert (\Theta_{Y,\omega} \sfH)_{\bm{x}}\vvert_{\sU,C^k,\bm{x},\nu,\mu} \\
    \leq {}&{} 
    \vvert (\id_{\cA_{Y}} \otimes \omega)( \sfH_{\bm{x}}) \vvert_{\sU,C^k,\bm{x},\nu,\mu}
    +
    \bigg\| \sum_{\bm{y} \in Y^c}\sum_{B_\rho \in \iota(\bB), B_\rho \ni \bm{x} }   (\id_{\cA_{Y}} \otimes \omega)(\sfH_{\bm{y}}^\rho)\bigg\|_{\sU,C^k,\bm{x},\nu,\mu}.
    \end{align*}
    The first term is bounded above by
    \begin{align}
        \begin{split}
    {}&{}\sup_{r>0} f_{\nu,\mu}(r)^{-1} \cdot \| (\id \otimes \omega)(\sfH_{\bm{x}}) - \Pi_{\bm{x},r}\big( \id \otimes  \omega)(\sfH_{\bm{x}}) \big)\|_{\sU,C^k} \\
    \leq {}&{} \sup_{r>0} f_{\nu,\mu}(r)^{-1} \cdot \| (\id \otimes \omega)\big( \sfH_{\bm{x}} - \Pi_{B_r(\bm{x}) \cup Y^c}(\sfH_{\bm{x}})  \big)\|_{\sU,C^k}\\
    \leq {}&{} \sup_{r>0} f_{\nu,\mu}(r)^{-1}\cdot \| \omega\|_{\sU,C^k,\nu,\mu} \cdot  \| \sfH_{\bm{x}} - \Pi_{B_r(\bm{x}) \cup Y^c}(\sfH_{\bm{x}})  \|_{\sU,C^k,\nu,\mu}\\
    \leq {}&{} \| \omega \|_{\sU,C^k,\nu,\mu} \cdot g_{\nu,\mu} \cdot \vvert \sfH \vvert_{\sU,C^k,\nu,\mu_1},
        \end{split}\label{eqn:redistribution.1}
    \end{align}
    where $g_{\nu,\mu}$ is the constant in \eqref{eqn:norm.comparison.truncated}. 
    By \eqref{eqn:NSY} and \eqref{eqn:interaction.to.local}, the second term is bounded above by
\begin{align}
    \begin{split}
     {}&{}\sup_{r>0}f_{\nu,\mu}(r)^{-1} \cdot \sum_{\substack{B_\rho \in \iota(\bB), B_\rho \ni \bm{x} \\ B_\rho \cap Y \not \subset B_r(\bm{x})} }  \Big\| \sum_{\bm{y} \in Y^c}(\id_{\cA_{Y}} \otimes \omega)(\sfH_{\bm{y}}^\rho)\Big\|_{\sU,C^k} \\
    \leq {}&{}\sup_{r>0} f_{\nu,\mu}(r)^{-1} \cdot \sum_{\substack{B_\rho \in \iota(\bB), B_\rho \ni \bm{x}\\ B_\rho \not \subset B_r(\bm{x})} } \Big\| \sum_{\bm{y} \in \Lambda} \sfH_{\bm{y}}^\rho \Big\| _{\sU,C^k} \\
    \leq  {}&{} 2^{l_{\Lambda} +1} \kappa_{\Lambda} \cdot \vvert \Phi_{\sfH} \vvert _{\sU,C^k,\nu_1,\mu} \leq  2^{l_{\Lambda} +1} \kappa_{\Lambda} \cdot L_{\nu_1,\mu} \cdot \vvert \sfH \vvert_{\sU,C^k,\nu_4,\mu_1}.
        \end{split}\label{eqn:redestribution.2}
\end{align}

    We show that $\Theta_{Y,\omega} \sfH$ is gapped. 
    First, the ground state $\omega_{\sfH}$ decomposes as $\omega_\sfH=\omega \otimes \omega_{Y}$ by a general fact in C*-algebra theory that a pure state $\omega$ on the tensor product C*-algebra $A \otimes B$ decomposes as $\omega =\omega|_A \otimes \omega|_B$ if $\omega|_A $ is a pure state. See for example  \cite{brownAlgebrasFinitedimensionalApproximations2008}*{Corollary 3.4.3}.
    Let $a \in \cA_{Y}^{\al}$ with $\|a\|=1$ and $\omega_{\sfH}(a) =0$. 
    Then, by the spectral gap assumption of $\sfH$, we have 
    \begin{align*}
        \omega_{Y}(a^*[\Theta_{Y,\omega}\sfH,a]) ={}&{} \sum_{\bm{x} \in \Lambda}  \omega_{Y} (a^* [(\id \otimes \omega_{{Y^c}})(\sfH_{\bm{x}}), a])
        =  \sum_{\bm{x} \in \Lambda}  \omega_{Y} \circ (\id \otimes \omega_{{Y^c}}) ( a^* [\sfH_{\bm{x}}, a])\\
        = {}&{} \sum_{\bm{x} \in \Lambda}  \omega_{\sfH}( a^* [\sfH_{\bm{x}}, a]) \geq \gap .        \qedhere
    \end{align*}
\end{proof}

In order to state the main result of this section, \cref{thm:interpolation.loop}, we introduce the following constant. 
For $Y \subset \Lambda$, a smooth family $\sfG \colon \sM \to \fDer ^{\al}_{\Lambda}$, $f_{\nu,\mu} \in \cF$, $k \in \bN$ and a relatively compact open chart $\sU$ of $\sM$, let
\begin{align} 
    K_{Y,\sfG,\nu,\mu}^{(k)} \coloneqq \sup_{\bm{x} \in Y^c} \sup_{p \in \sU}  f_{\nu,\mu}(\mathrm{dist} (\bm{x},Y))^{-1} \cdot \|  \sfG _{\bm{x}} \|_{\sU,C^k}. \label{eqn:constant.K}
\end{align}
If $Y $ is a single point, we abbreviate it and simply write $K_{\sfG,\nu,\mu}^{(k)}$. 

\begin{rmk}\label{rmk:cutoff.approximate}
    The truncated Hamiltonian $\Theta_{Y,\omega}\sfH$ in \cref{prp:cut.trivial.Hamiltonian} satisfies $K_{Y^c,\sfH - \Theta_{Y,\omega}\sfH,\nu,\mu}^{(k)} < \infty$ for any $f_{\nu,\mu} \in \cF$, $k \in \bN$ and a relatively compact open chart $\sU$ of $\sM$. 
    This can be verified by using the inequalities \eqref{eqn:norm.comparison.truncated}, \eqref{eqn:redistribution.1}, and \eqref{eqn:redestribution.2}, applied to $\sfH - \Pi_r\sfH$, where $\Pi_r\sfH$ is the family of operators $\Pi_r\sfH$ given by $(\Pi_r\sfH )_{\bm{x}} \coloneqq \Pi_{\bm{x},r}(\sfH_{\bm{x}})$. 
    Indeed, by letting $r \coloneqq \mathrm{dist}(\bm{x},Y^c)$, we obtain 
    \begin{align*}
        {}&{} \| \sfH_{\bm{x}} - (\Theta_{Y,\omega}\sfH) _{\bm{x}} \|_{\sU,C^k} \\
        \leq {}&{}\| (\sfH -\Pi_{r}\sfH_{\bm{x}}) - \Theta_{Y,\omega}(\sfH -\Pi_{r}\sfH )_{\bm{x}} \|_{\sU,C^k} \\
        \leq {}&{} f_{\nu,\mu}(r) \cdot \vvert \sfH \vvert_{\sU,C^k,\nu,\mu_1} + \| \omega\|_{\sU,C^k,\nu,\mu} \cdot g_{\nu,\mu}g_{\nu,\mu_1} \cdot f_{\nu,\mu_1}(r) \cdot \vvert \sfH \vvert_{\sU,C^k,\nu,\mu_2}\\
        {}&{} \qquad \qquad + 2^{l_{\Lambda} +1}\kappa_{\Lambda }L_{\nu_1,\mu} g_{\nu_4,\mu_1}f_{\nu_4,\mu_1}(r) \cdot \vvert \sfH \vvert_{\sU,C^k,\nu_4,\mu_2}.   
    \end{align*}
\end{rmk}

\begin{thm}\label{thm:interpolation.loop}
    Let $\Lambda$ be weakly uniformly discrete in $\bR^{l_{\Lambda}}$ and let $Y \subset \Lambda$. Let $\sfH \colon \sM \times [0,1] \to \fH_{\Lambda}^{\al}$ be a smooth map such that $\sfH(p,1) = \sfh \boxtimes \sfH_{Y^c}$ for some smooth family $\sfH_{Y^c} \colon \sM \times [0,1] \to \fH_{Y^c}^{\al}$. Then, there is another smooth map $\tilde{\vartheta} \sfH \colon \sM \times [0,1] \to \fH_{Y^c}^{\al}$ such that 
\begin{enumerate}
    \item $\tilde{\vartheta}\sfH (p,1) = \sfH (p,1)$, and
    \item the constant $K_{Y,\tilde{\vartheta} \sfH \boxtimes \sfh - \sfH,\nu,\mu}^{(k)}$ in \eqref{eqn:constant.K} is finite for any $f \in \cF$, $k \in \bN$ and  $\sU$.
\end{enumerate}
\end{thm}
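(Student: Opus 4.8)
The plan is to convert the temporal interpolation $\sfH(p,\blank)$ into a spatial one by transporting it along a \emph{truncated} adiabatic connection (\cref{defn:automorphic.connection}, \cref{thm:automorphic.equivalence}), and then to cut off the half $Y$ with \cref{prp:cut.trivial.Hamiltonian}. It is cleaner to reverse the parameter: set $\bar\sfH(p,u) \coloneqq \sfH(p,1-u)$, so that $\bar\sfH(p,0) = \sfh \boxtimes \sfH_{Y^c}(p,1)$, and let $\sfG \coloneqq \sfG_{\bar\sfH} \in \Omega^1(\sM \times [0,1]_u, \fDer^{\al}_\Lambda)$ be its adiabatic connection (\cref{lem:adiabatic.well-defined}). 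By \cref{thm:automorphic.equivalence}, $\omega_{\bar\sfH(p,u)} = \omega_{\bar\sfH(p,0)} \circ \alpha_p(\sfG\,;u)$, where $\omega_{\bar\sfH(p,0)} = \omega_0|_{\cA_Y} \otimes \omega_{\sfH_{Y^c}(p,1)}$ is a product state whose marginal on $\cA_Y$ is the pure state $\omega_0|_{\cA_Y}$, itself the ground state of $\sfh$.

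I would then invoke the cone decomposition of \cref{lem:LGA.cone.decomposition} with respect to $Y$, which factors $\alpha_p(\sfG\,;u) = \alpha_p(\Pi_{Y^c}(\sfG)\,;u) \circ \alpha_p(\Pi_Y^\star(\sfG)\,;u)$. Since each $\Pi_{Y^c}(\sfG_{\bm{x}})$ lies in $\cA_{Y^c}$, the factor $\alpha_p(\Pi_{Y^c}(\sfG)\,;u)$ is a genuine $\ast$-automorphism of $\cA_{Y^c}$ (the identity on $\cA_Y$), while the near-$Y$ factor $\beta_p(u) \coloneqq \alpha_p(\Pi_Y^\star(\sfG)\,;u)$ has $\beta_p(0)=\id$ and obeys the decay bound of \cref{lem:LGA.cone.decomposition}, $\|\beta_p(u)(a)-a\|_{\sU,C^k,\bm{x},\nu,\mu} \lesssim f_{\nu,\mu}(\mathrm{dist}(\bm{x},Y)/2)\,\|a\|_{\bm{x},\nu_k,\mu_{3k+2}}\,\Upsilon^{\star,(k)}_{\sfG,\nu,\mu}(u)$. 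Put $\bar\sfK(p,u) \coloneqq \beta_p(u)(\bar\sfH(p,u))$; by \cref{lem:smooth} and \cref{prp:Lieb.Robinson.Ck} this is a smooth family of gapped UAL Hamiltonians on $\cA_\Lambda$, and the factorization gives
\[
  \omega_{\bar\sfK(p,u)} = \omega_{\bar\sfH(p,u)}\circ\beta_p(u)^{-1} = \omega_{\bar\sfH(p,0)}\circ\alpha_p(\Pi_{Y^c}(\sfG)\,;u) = \omega_0|_{\cA_Y}\otimes\big(\omega_{\sfH_{Y^c}(p,1)}\circ\alpha_p(\Pi_{Y^c}(\sfG)\,;u)|_{\cA_{Y^c}}\big),
\]
so that $\omega_{\bar\sfK(p,u)}$ restricts to \emph{exactly} $\omega_0|_{\cA_Y}$ on $\cA_Y$.

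Next I would cut. Applying \cref{prp:cut.trivial.Hamiltonian} with the roles of $Y$ and $Y^c$ interchanged, to the family $\bar\sfK$ and the constant family of pure states $\omega_0|_{\cA_Y}$ (which is the restriction of $\omega_{\bar\sfK}$ to $\cA_Y$, and the ground state of the family $\sfh$ on $\cA_Y$), I obtain a smooth family of gapped UAL Hamiltonians $\bar{\tilde\vartheta}\sfH(p,u) \coloneqq \Theta_{Y^c,\,\omega_0|_{\cA_Y}}(\bar\sfK(p,u))$ on $\cA_{Y^c}$, with $\bar\sfK(p,u)$ and $\bar{\tilde\vartheta}\sfH(p,u)\boxtimes\sfh$ sharing a ground state. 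Finally I set $\tilde\vartheta\sfH(p,t) \coloneqq \bar{\tilde\vartheta}\sfH(p,1-t)$. For (1): at $u=0$, $\bar\sfK(p,0) = \sfh\boxtimes\sfH_{Y^c}(p,1)$, and one checks directly (the redistributed pieces $\sfh^\rho_{\bm{y}}$ with $\bm{y}\in Y$ all lie in $\cA_Y$) that $\Theta_{Y^c,\,\omega_0|_{\cA_Y}}(\sfh\boxtimes\sfH_{Y^c}(p,1)) = \sfH_{Y^c}(p,1)$, whence $\tilde\vartheta\sfH(p,1) = \sfH(p,1)$. For (2): for $\bm{x}\in Y^c$ I split
\[
  \big(\tilde\vartheta\sfH\boxtimes\sfh - \sfH\big)_{\bm{x}}(p,t) = \big(\Theta_{Y^c,\,\omega_0|_{\cA_Y}}\bar\sfK - \bar\sfK\big)_{\bm{x}}(p,1-t) + \big(\bar\sfK - \bar\sfH\big)_{\bm{x}}(p,1-t);
\]
the first summand is $O\big(f_{\nu,\mu}(\mathrm{dist}(\bm{x},Y))\big)$ in every $C^k$-seminorm by the estimates \eqref{eqn:redistribution.1}, \eqref{eqn:redestribution.2} underlying \cref{rmk:cutoff.approximate}, and the second is $O\big(f_{\nu,\mu}(\mathrm{dist}(\bm{x},Y))\big)$ by the cone bound above (after the standard comparison of $\cF$-functions absorbing the factor $\tfrac12$) and \cref{prp:Lieb.Robinson.Ck}. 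Hence $K^{(k)}_{Y,\,\tilde\vartheta\sfH\boxtimes\sfh-\sfH,\,\nu,\mu}<\infty$ for all $f_{\nu,\mu}\in\cF$, $k$, and relatively compact charts $\sU$ of $\sM\times[0,1]$.

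The step I expect to be the real obstacle is reconciling the two conflicting demands on $\beta_p(u)$: it has to be supported near $Y$ so that $\bar\sfK$ still equals $\bar\sfH$ far inside $Y^c$ (which is exactly what drives (2)), yet the resulting ground state must be an \emph{exact} tensor product across $\partial Y$ — the spectral-gap argument in \cref{prp:cut.trivial.Hamiltonian} relies on the splitting $\omega_{\sfH}=\omega_{\sfH}|_{\cA_{Y^c}}\otimes\omega_{\sfH}|_{\cA_Y}$ that only purity of a marginal provides, and approximate purity would not do. A crude truncation of the full transport $\alpha_p(\sfG\,;u)$ to a collar of $Y$ gives only approximate purity; what saves the argument is precisely the cone decomposition of \cref{lem:LGA.cone.decomposition}, whose complementary factor $\alpha_p(\Pi_{Y^c}(\sfG)\,;u)$ is an exact automorphism of $\cA_{Y^c}$ and therefore leaves the product structure of $\omega_{\bar\sfH(p,0)}$ untouched. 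The rest — propagating $C^k$-seminorm bounds through the composition defining $\bar\sfK$, through its ground state, and through the cut, uniformly over relatively compact charts of $\sM\times[0,1]$ — is routine, using \cref{prp:Lieb.Robinson.Ck}, \cref{lem:LGA.cone.decomposition}, \cref{cor:Lieb.Robinson.approx.Ck}, and \cref{rmk:cutoff.approximate}.
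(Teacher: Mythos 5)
Your proposal is correct and follows essentially the same strategy as the paper: define $\tilde\vartheta\sfH$ by conjugating $\sfH(p,t)$ with the near-$Y$ factor $\alpha(\Pi_Y^\star(\sfG)\,;\cdot)$ of the cone decomposition (\cref{lem:LGA.cone.decomposition}) so that the transported ground state restricts to $\omega_0$ on $\cA_Y$ exactly (because the complementary factor $\alpha(\Pi_{Y^c}(\sfG)\,;\cdot)$ preserves the tensor split), and then cut with $\Theta_{Y^c}$ via \cref{prp:cut.trivial.Hamiltonian}. The only cosmetic difference is your reversal $u = 1-t$ with forward transport $\alpha(\cdot\,;0,u)$, where the paper transports directly from $t$ to $1$ via $\alpha(\Pi_Y^\star(\sfG_{\sfH})\,;t,1)$; both arrangements give $\tilde\vartheta\sfH(p,1)=\sfH_{Y^c}(p,1)$ from $\alpha(\cdot\,;0)=\id$ resp.\ $\alpha(\cdot\,;1,1)=\id$, and the estimate for (2) via \cref{lem:LGA.cone.decomposition} and \cref{rmk:cutoff.approximate} is identical.
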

\begin{proof}
    The smooth family $\tilde{\vartheta} \sfH $ is defined by
    \[
        \tilde{\vartheta} \sfH(p,t) \coloneqq \Theta_{Y^c , \omega_0} \circ \alpha \big( \Pi^{\star}_{Y}(\sfG_{\sfH}) \,; t,1 \big)^{-1} (\sfH(p,t)). 
    \]
    Indeed, by \cref{lem:LGA.cone.decomposition}, we have
    \begin{align*}
    \omega_{\sfH (p,t)} \circ \alpha_t \big( \Pi^{\star}_{Y}(\sfG_{\sfH}) \,; t,1 \big) (a) = {}&{} \omega_{\sfH (p,t)} \circ \alpha_t \big(\sfG_{\sfH} \,; t,1 \big) \circ \alpha_t \big( \Pi_{Y^c} (\sfG_{\sfH}) \,; t,1 \big) (a) \\
    = {}&{} \omega_{\sfH (p,t)} \circ \alpha_p \big(\sfG_{\sfH} \,; t,1 \big) (a) = \omega_0(a)
    \end{align*}
    for any $a \in \cA_{Y}$, and hence we can apply \cref{prp:cut.trivial.Hamiltonian} to obtain that the above $\tilde{\vartheta} \sfH (p,t)$ has the same spectral gap as that of $\sfH$. 
    It satisfies the conditions (1) by definition. Finally, the condition (2) follows from \cref{lem:LGA.cone.decomposition} since the operator norm  $\| \blank \|$ is by definition bounded by $ \| \blank \|_{\bm{x},\nu,\mu}$. 
\end{proof}

The above condition (2) is available to truncate a gapped UAL Hamiltonian.
The following definition is inspired from \cite{ludewigQuantizationConductanceCoarse2023}.

\begin{defn}\label{defn:linearly.coarsely.transverse}
    A pair of subspaces $Y,Z \subset \Lambda $ is \emph{linearly coarsely transverse} if, for some (any) reference point $\bm{x}_0 \in \Lambda$, there is $k>0$ such that $N_{r}(Y) \cap  N_r(Z) \subset B_{k(1+r)}(\bm{x}_0)$ for any $r>0$. 
\end{defn}
Typically, if $\Lambda$ is a Delone set of $\bR^{d}$, the following conical regions $Z_L$, $Z_R$ gives a linearly coarsely transverse pair (see \cref{fig:transverse.cone}):
\begin{align}
\begin{split}
    Z_L^{\theta_L} \coloneqq {}&{} \{ \bm{x} \in \Lambda \mid x_d \leq -\| \bm{x}\| \cdot \sin(\theta_L) \}, \\
    Z_R^{\theta_R} \coloneqq {}&{} \{ \bm{x} \in \Lambda \mid x_d \geq \| \bm{x} \| \cdot \sin(\theta_R) \}.
\end{split}\label{eqn:conical.region}
\end{align}
Here, $x_d$ denotes the $d$-th coordinate function on $\bR^{d}$ and $\theta_L, \theta_R \in [0,\pi/2)$ are fixed angles such that either of them is non-zero (we deal with the cases $(\theta_L, \theta_R) = (\pi /4, 0)$ or $(0,\pi/4)$). 
\begin{figure}
    \centering
    \begin{tikzpicture}[scale=0.5]
        \draw[->] (-6,0) -- (6,0);
        \draw[->] (0,-3.5) -- (0,3.5);
        \path[pattern = north east lines] (-6.0,-3.5) -- (-4.0,-3.5) -- (0,0) -- (-4.0,3.5) -- (-6.0,3.5) -- cycle;
        \path[pattern = north east lines] (6.0,-3.5) -- (2.0,-3.5) -- (0,0) -- (2.0,3.5) -- (6.0,3.5) -- cycle;
        \draw (0.0,0.0) -- (0.0, 1.0) arc[x radius = 1.0, y radius = 1.0, start angle = 90, end angle = 138.81] -- cycle;
        \draw (0.0,0.0) -- (0.0,-1.0) arc[x radius = 1.0, y radius = 1.0, start angle = -90, end angle = -60.26] -- cycle;
        \draw (-4.0,-3.5) -- (0,0) -- (-4.0,3.5);
        \draw (2.0,-3.5) -- (0,0) -- (2.0,3.5);
        \node[fill=white] at (-4.5,-1.5) {$Z_L$};
        \node[fill=white] at (4.5,-1.5) {$Z_R$};
        \node at (-0.8,1.8) {$\theta_L$};
        \node at (0.5,-1.8) {$\theta_R$};
    \end{tikzpicture}
    \caption{Example of linearly coarsely transverse pair $Z_L$, $Z_R$. }
    \label{fig:transverse.cone}
\end{figure}
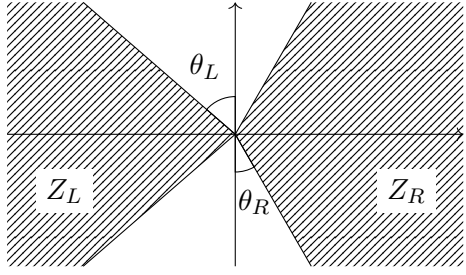

\begin{lem}\label{lem:inner.derivation}
    Let $\sfG \colon \sM \to \fDer ^{\al}_{\Lambda}$ be a smooth map such that he constant $K_{\sfG,\nu,\mu}^{(k)}$ in \eqref{eqn:constant.K} is finite 
    for any $f_{\nu,\mu} \in \cF_1$, any $k \in \bN$, and any relatively compact open chart $\sU$. 
    Then the infinite sum $g \coloneqq \sum_{\bm{x} \in \Lambda} \sfG_{\bm{x}}$ converges to a smooth map $g \colon \sM \to \cA_{\Lambda }$ in the almost local topology. 
\end{lem}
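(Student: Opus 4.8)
The plan is to show two things: first, that the sum $g(p) = \sum_{\bm{x} \in \Lambda} \sfG_{\bm{x}}(p)$ converges in the $C^\infty$-topology of $\cA_\Lambda^{\al}$ uniformly on each relatively compact chart, and second, that the resulting limit is almost local with locally bounded Fréchet seminorms, hence defines a smooth map $\sM \to \cA_\Lambda$ in the sense that $\cA_\Lambda^{\al} \subset \cA_\Lambda$ inherits smoothness. The key input is the hypothesis that $K_{\sfG,\nu,\mu}^{(k)} < \infty$, which by \eqref{eqn:constant.K} (with $Y = \{\bm{x}_0\}$) says $\| \sfG_{\bm{x}} \|_{\sU,C^k,\bm{x}_0,\nu,\mu} \lesssim f_{\nu,\mu}(\rmd(\bm{x},\bm{x}_0))$, i.e., the local generators decay like an $F$-function in the distance from the reference point.

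The first step is to fix a reference point $\bm{x}_0 \in \Lambda$ and a relatively compact chart $(\sU\,; x_1,\dots,x_n)$, and estimate, for each multi-index $I$ with $|I| \le k$, the quantity $\| \partial^I \sfG_{\bm{x}}(p) \|_{\bm{x}_0,\nu,\mu}$. To compare the seminorm based at $\bm{x}$ with the one based at $\bm{x}_0$, I would use the submultiplicativity-type estimate: for $r \ge \rmd(\bm{x},\bm{x}_0)$ one has $B_r(\bm{x}) \supset B_{r - \rmd(\bm{x},\bm{x}_0)}(\bm{x}_0)$ (or the reverse inclusion with roles swapped), together with the elementary $\| a - \Pi_X(a)\| \le 2 \| a - \Pi_Y(a)\|$ for $Y \subset X$ already used in the proof of \cref{lem:almost.local.continuity}, and the large-scale Lipschitz bound \eqref{eqn:large.scale.Lipschitz.F}. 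This yields $\| \partial^I \sfG_{\bm{x}}(p)\|_{\bm{x}_0,\nu,\mu} \le C \cdot f_{\nu,\mu}(\rmd(\bm{x},\bm{x}_0))$ for a constant $C$ depending on $K_{\sfG,\nu,\mu_1}^{(k)}$ and the dimension $l_\Lambda$, uniformly in $p \in \sU$. Then, since $\cF_1$ consists of $F$-functions on $\Lambda$ by \cref{prp:Ffunction} and \cref{rmk:seminorms}, \cref{lem:sum.exponential} (or the normalization bound $\sum_{\bm{z} \in \Lambda} f_{\nu,\mu}(\rmd(\bm{x}_0,\bm{z})) \le 2^{l_\Lambda+1}\kappa_\Lambda$) gives that $\sum_{\bm{x} \in \Lambda} \| \partial^I \sfG_{\bm{x}}(p) \|_{\bm{x}_0,\nu,\mu} < \infty$, with bound uniform on $\sU$.

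The second step assembles this into smoothness of the limit. The partial sums $g_Z(p) := \sum_{\bm{x} \in Z} \sfG_{\bm{x}}(p)$, indexed by finite $Z \subset \Lambda$, form a net that is Cauchy in every seminorm $\| \partial^I (\blank) \|_{\bm{x}_0,\nu,\mu}$ uniformly on $\sU$, because the tail $\sum_{\bm{x} \in \Lambda \setminus Z}$ of the convergent series above is the relevant error. Since $\cA_\Lambda^{\al}$ is a Fréchet space, hence complete, and since $C^k$-uniform convergence of all derivatives on relatively compact charts implies $C^\infty$-convergence of the limit (a standard fact: uniform limits of $C^k$-functions whose first $k$ derivatives converge uniformly are $C^k$, and one lets $k \to \infty$), the limit $g$ is a smooth function $\sU \to \cA_\Lambda^{\al}$. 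As this holds on every chart of an open cover, $g$ is globally smooth in the sense of \cref{defn:smooth}(1)(2), hence a fortiori a smooth map into $\cA_\Lambda$ via \cref{rmk:Cstar}.

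The main obstacle is the first step: correctly transferring the based seminorm $\| \blank\|_{\bm{x},f}$ (in which $\sfG_{\bm{x}}$ is naturally controlled by $K^{(k)}$) to the fixed based seminorm $\| \blank \|_{\bm{x}_0,f'}$ without losing the $F$-function decay, since the bookkeeping of how $\mu$ degrades to $\mu_1$ and $\nu$ to $\nu + (l_\Lambda+1)$ under the shift of base point must be tracked through \eqref{eqn:large.scale.Lipschitz.F}. Once that comparison estimate is in hand, the summation and completeness arguments are routine. One should also note, as the statement implicitly requires, that the hypothesis is assumed for \emph{all} $f_{\nu,\mu} \in \cF_1$ and all $k$, so the bound on each Fréchet seminorm of $g$ is obtained by choosing the appropriate $(\nu,\mu,k)$ in the hypothesis — there is no single uniform constant, which is fine for membership in the Fréchet algebra $\cA_\Lambda^{\al}$.
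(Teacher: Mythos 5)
Your proposal is correct but organizes the estimate in a genuinely different way from the paper. You fix $\bm{x}$, bound the $(\bm{x}_0,f)$-based almost-local $C^k$-norm of each individual summand by $C\cdot f'(\rmd(\bm{x},\bm{x}_0))$ for a faster $f'$, and then sum over $\bm{x}$ via the $F$-function normalization bound, appealing at the end to Fr\'echet completeness and the $C^\infty$-uniform-convergence criterion. The paper instead fixes $r$ and estimates $f_{\nu,\mu}(r)^{-1}\|g - \Pi_{\bm{x}_0,r}(g)\|_{\sU,C^k}$ directly by splitting the single sum $g=\sum_{\bm{x}}\sfG_{\bm{x}}$ into $\bm{x}\in B_{r/2}(\bm{x}_0)$, where the almost-locality seminorm $\vvert\sfG\vvert_{\sU,C^k,\nu_1,\mu_1}$ controls the cutoff error, and $\bm{x}\notin B_{r/2}(\bm{x}_0)$, where $K^{(k)}_{\sfG,\nu,\mu_1}$ controls the operator norm; this yields an $r$-uniform bound in one display \eqref{eqn:inner.norm} and handles convergence and almost-locality simultaneously. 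Both arguments rest on the same near/far dichotomy and the same degradation $\mu\mapsto\mu_1$, $\nu\mapsto\nu_1$; the paper's order of operations (sum over $\bm{x}$ before taking $\sup_r$) is somewhat leaner, avoiding the per-point basepoint-shift estimate for each $\sfG_{\bm{x}}$, which in your route requires separate handling of the regimes $r\lesssim\rmd(\bm{x},\bm{x}_0)$ and $r\gtrsim\rmd(\bm{x},\bm{x}_0)$ and a slightly larger $\nu''$ in the polynomial bookkeeping than the $K_{\sfG,\nu,\mu_1}^{(k)}$ you quote. Your strategy is more transparent about why the tail of the series is small (each summand is small), at the cost of that extra case analysis.
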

\begin{proof}
We have
    \begin{align}
    \begin{split}
        {}&{} f_{\nu,\mu}(r)^{-1} \cdot  \| g - \Pi_{\bm{x},r}(g)\|_{\sU,C^k} \\
        \leq {}&{}  \sum_{\bm{x} \in B_{r/2}(\bm{x}_0)} f_{\nu,\mu}(r)^{-1} \cdot \| \sfG_{\bm{x}} - \Pi_{\bm{x}_0,r/2}(\sfG_{\bm{x}}) \|_{\sU,C^k} + \sum_{\bm{x} \not \in B_{r/2}(\bm{x}_0)} 2f_{\nu,\mu}(r)^{-1} \cdot \| \sfG_{\bm{x}} \|_{\sU,C^k} \\
        \leq {}&{} \kappa_{\Lambda}  \cdot c_{\nu,\mu,2} \cdot \vvert \sfG \vvert_{\sU,C^k,\nu_1,\mu_1} 
        + 2K_{\sfG,\nu,\mu_1}^{(k)} \cdot 2^{l_\Lambda +1}\kappa_{\Lambda} \cdot c_{\nu,\mu,2} <\infty,
    \end{split}\label{eqn:inner.norm}
    \end{align}
    where $c_{\nu,\mu,2}$ be the constant as in \eqref{eqn:const.small.c}.
\end{proof}

\begin{lem}\label{lem:asymptotically.equal.derivation}
    Let $\sfG, \sfG' \in \Omega^1(\sM \times [0,1], \fDer ^{\al}_{\Lambda})$ be smooth $1$-form of UAL derivations such that the constant $K_{\sfG-\sfG',\nu,\mu}^{(k)}$ in \eqref{eqn:constant.K} is finite for any $k \in \bZ_{\geq 0}$, a relatively compact open chart $\sU \subset \sM \times[0,1]$, and $f_{\nu,\mu} \in \cF_1$. Then, there is a Fr\'{e}chet smooth map $u \colon \sM \times [0,1] \to \cA_{\Lambda}^{\al}$ such that each $u(p,t)$ is unitary and $\alpha_p(\sfG \,; t)  = \alpha_p(\sfG' \,; t) \circ \Ad(u(p,t))$. 
\end{lem}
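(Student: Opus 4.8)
The plan is to identify the automorphism $\alpha_p(\sfG';t)^{-1}\circ\alpha_p(\sfG;t)$ as conjugation by an almost local unitary, using the relative-evolution formula of \cref{prp:relative.evolution.is.evolution} together with \cref{lem:inner.derivation} and \cref{rmk:almost.local.unitary}. \emph{Reduction.} Applying \cref{prp:relative.evolution.is.evolution} over each $\{p\}\times[0,1]\subset\sM\times[0,1]$ gives
\begin{align*}
\alpha_p(\sfG';t)^{-1}\circ\alpha_p(\sfG;t)=\alpha_p(\sfK;t),\qquad \sfK_{\bm x}(p,s)\coloneqq\alpha_p(\sfG';s)^{-1}\big(\sfG_{\bm x}(p,s)-\sfG'_{\bm x}(p,s)\big),
\end{align*}
where $\sfK\in\Omega^1(\sM\times[0,1],\fDer^{\al}_\Lambda)$ is a smooth $1$-form of UAL derivations by \cref{lem:smooth} (3), with skew-adjoint local generators since $\ast$-automorphisms preserve skew-adjointness. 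It therefore suffices to produce a Fr\'echet smooth $u\colon\sM\times[0,1]\to\cA_\Lambda^{\al}$ with unitary values such that $\alpha_p(\sfK;t)=\Ad(u(p,t))$.

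\emph{Decay of $\sfK$.} The key step is to verify that $\sfK$ satisfies the hypothesis of \cref{lem:inner.derivation}, i.e.\ $K^{(k)}_{\sfK,\nu,\mu}<\infty$ for every $k\in\bZ_{\geq 0}$, every relatively compact open chart $\sU\subset\sM\times[0,1]$, and every $f_{\nu,\mu}\in\cF_1$. For $k=0$ this is immediate from isometry of $\ast$-automorphisms: $\|\sfK_{\bm x}(p,s)\|=\|\sfG_{\bm x}(p,s)-\sfG'_{\bm x}(p,s)\|\leq K^{(0)}_{\sfG-\sfG',\nu,\mu}\cdot f_{\nu,\mu}(\mathrm{dist}(\bm x,\bm x_0))$. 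For $k\geq 1$ one expands $\partial^I\sfK_{\bm x}$, as in the proofs of \cref{lem:smooth} (2) and \cref{prp:Lieb.Robinson.Ck}, into a finite sum of iterated integrals
\begin{align*}
\int\alpha_p(\sfG';\cdot)\circ\sfad(\partial^{J_j}\sfG')\circ\cdots\circ\sfad(\partial^{J_1}\sfG')\circ\alpha_p(\sfG';\cdot)\big(\partial^{J_0}(\sfG_{\bm x}-\sfG'_{\bm x})\big)
\end{align*}
with $J_0+\cdots+J_j=I$; using the $\Upsilon$-bounds of \cref{prp:Lieb.Robinson} (2) (uniform for $s\in[0,1]$ and $p$ in the relatively compact $\sU$) and the continuity bound of \cref{lem:almost.local.continuity}, the operator norm of each term is bounded, uniformly in $(p,s)$, by a constant times the \emph{almost local} seminorm $\|\partial^{J_0}(\sfG_{\bm x}-\sfG'_{\bm x})\|_{\bm x,\nu_m,\mu_{m'}}$ at $\bm x$, for indices $m,m'$ depending only on $k$. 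Since $K^{(k)}_{\sfG-\sfG',\nu,\mu'}<\infty$ for \emph{every} $\mu'<1$, the operator norm $\|\partial^{J}(\sfG_{\bm x}-\sfG'_{\bm x})\|$ decays faster than $f_{\nu,\mu'}(\mathrm{dist}(\bm x,\bm x_0))$ for every $\mu'$; combining the bounds $\|a-\Pi_{\bm x,r}(a)\|\leq 2\|a\|$ and $\|a-\Pi_{\bm x,r}(a)\|\leq f_{\nu,\mu'}(r)\cdot\|a\|_{\bm x,\nu,\mu'}$ (cf.\ \eqref{eqn:norm.comparison.truncated}) and splitting the supremum defining $\|\cdot\|_{\bm x,\nu,\mu}$ at $r=\mathrm{dist}(\bm x,\bm x_0)$ shows, for fixed $\mu<\mu'<1$, that $\|\partial^{J}(\sfG_{\bm x}-\sfG'_{\bm x})\|_{\bm x,\nu,\mu}\leq C\cdot f_{\nu,\mu}(\mathrm{dist}(\bm x,\bm x_0))$ (the bounded range of $\mathrm{dist}(\bm x,\bm x_0)$ being trivial). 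Feeding this back gives $K^{(k)}_{\sfK,\nu,\mu}<\infty$.

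\emph{Integration.} By \cref{lem:inner.derivation}, $g(p,t)\coloneqq\sum_{\bm x\in\Lambda}\sfK_{\bm x}(p,t)$ converges in the almost local topology to a Fr\'echet smooth map $g\colon\sM\times[0,1]\to\cA_\Lambda^{\al}$ with skew-adjoint values, and $\sfad(\sfK)(a)=\sum_{\bm x}[\sfK_{\bm x},a]=[g(p,t),a]$ for $a\in\cA_\Lambda^{\al}$ (the sum being norm-absolutely convergent by \cref{prp:Ffunction}). Let $v(p,\cdot)$ be the unitary solution of the linear ODE $\tfrac{d}{dt}v(p,t)=v(p,t)\,g(p,t)$, $v(p,0)=1$, considered in \cref{rmk:almost.local.unitary}; since $g(p,\cdot)$ is a smooth family of skew-adjoint almost local operators, $v(p,t)$ is unitary and almost local by \cref{rmk:almost.local.unitary}, and smoothness of $p\mapsto v(p,t)$ in the almost local topology follows from that of $g$ and the uniform Lieb--Robinson bounds as in the proof of \cref{lem:smooth} (2). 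Put $u(p,t)\coloneqq v(p,t)^*$, again a smooth family of almost local unitaries. Then both $a\mapsto\Ad(u(p,t))(a)=v(p,t)^*a\,v(p,t)$ and $a\mapsto\alpha_p(\sfK;t)(a)$ solve $\tfrac{d}{dt}(\,\cdot\,)=-[g(p,t),\,\cdot\,]$ with initial value $a$, hence coincide by uniqueness of solutions. Therefore $\alpha_p(\sfK;t)=\Ad(u(p,t))$, and consequently $\alpha_p(\sfG;t)=\alpha_p(\sfG';t)\circ\Ad(u(p,t))$, as required.

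\emph{Main obstacle.} The delicate part is the decay estimate for $\sfK$ and its $C^k$-derivatives in the second paragraph: propagating the locality of $\sfG-\sfG'$ through $\alpha_p(\sfG';s)^{-1}$ in \emph{all} Fr\'echet seminorms, which forces both the trading argument between fast operator-norm decay and slower almost local decay and a careful bookkeeping of the index shifts $(\nu,\mu)\mapsto(\nu_m,\mu_{m'})$ coming from \cref{prp:Lieb.Robinson.Ck}.
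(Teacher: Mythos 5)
Your proof is correct and arrives at the same conclusion, but your route to the inner generator $g$ is harder than it needs to be. Both arguments reduce, via \cref{prp:relative.evolution.is.evolution}, to showing that the derivation $\sfK=\alpha_p(\sfG';t)^{-1}(\sfG-\sfG')$ is inner and implemented by a smooth family of almost local unitaries. You apply $\alpha_p(\sfG';s)^{-1}$ to each local generator $\sfK_{\bm{x}}$ first and only then verify the hypothesis $K^{(k)}_{\sfK,\nu,\mu}<\infty$ of \cref{lem:inner.derivation}, which forces the delicate trading argument you flag as the main obstacle (propagating operator-norm decay of $\sfG-\sfG'$ through all the Fr\'echet seminorms of the automorphism and its $C^k$-derivatives, with the index bookkeeping from \cref{prp:Lieb.Robinson.Ck}). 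The paper reverses the order of summation and conjugation: it applies \cref{lem:inner.derivation} directly to $\sfG-\sfG'$, whose $K^{(k)}$ constants are finite by the standing hypothesis of the lemma, to produce $g'(p,t)=\sum_{\bm{x}}(\sfG_{\bm{x}}-\sfG'_{\bm{x}})\in\cA_{\Lambda}^{\al}$, and only afterwards sets $g(p,t)=\alpha_p(\sfG';t)^{-1}(g'(p,t))$, which is almost local and Fr\'echet smooth simply because the LG automorphism preserves $\cA_{\Lambda}^{\al}$ continuously (\cref{lem:smooth}, \cref{prp:Lieb.Robinson}~(2)). Since that automorphism commutes with the Fr\'echet-convergent sum, your $g$ and the paper's $g$ coincide; but the paper's ordering never needs the per-site decay of the $\sfK_{\bm{x}}$, so your entire ``Decay of $\sfK$'' step is avoidable. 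The integration step — solving the ODE, invoking \cref{rmk:almost.local.unitary}, identifying $\Ad(u)$ with the parallel transport of the inner derivation via \cref{prp:relative.evolution.is.evolution} — is the same up to the sign/adjoint convention in writing the ODE.
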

\begin{proof}
    By \cref{lem:inner.derivation}, the infinite sum 
    \[
    g'(p,t) \coloneqq \sum_{\bm{x} \in \Lambda} (\sfG_{\bm{x}}(p,t) - \sfG_{\bm{x}}'(p,t))
    \]
    converges to a smooth $1$-form of almost local operators. Let $g (p,t) \coloneqq \alpha_p(\sfG' \,; t)^{-1}(g'(p,t))$, which is again a smooth function taking value in $\cA_{\Lambda}^\al$ by \cref{lem:smooth}.
    Now, the solution $u(p,t)$ of the ordinary differential equation 
    \[
    u(p,t)^{*}\frac{d}{dt}u(p,t) = g(p,t),
    \]
    which takes value in $\cA_{\Lambda}^{\al}$ by \cref{rmk:almost.local.unitary}, satisfies the desired equation $\Ad(u(p,t)) =  \alpha_p(\sfG' \,; t)^{-1} \circ \alpha_p(\sfG \,; t)$ by \cref{prp:relative.evolution.is.evolution}. 
\end{proof}

\begin{lem}\label{lem:cut.diffused}
    Let $\sfH, \sfH' \colon \sM \to \fH_{\Lambda}^{\al}$ be a smooth family such that $\sum_{\bm{x} \in \Lambda} \| \sfH_{\bm{x}} - \sfH_{\bm{x}}'\|  < 1/4$, $\| \omega_{\sfH} - \omega_{\sfH'} \| < 2$, and the constant $K_{\sfH-\sfH',\nu,\mu}^{(k)}$ in \eqref{eqn:constant.K} is finite 
    for any $k \in \bZ_{\geq 0}$, a relatively compact open chart $\sU \subset \sM \times[0,1]$, and $f_{\nu,\mu} \in \cF_1$. Then, for any $(p,t) \in \sM \times [0,1]$, the UAL Hamiltonian 
    \[
    \widetilde{\sfH}(p,t) \coloneqq 2t\sfH'(p) + 2(1-t)\sfH
    \]
    has a spectral gap $1$ with respect to an appropriate choice of smooth families of distinguished ground states. Moreover, there is a smooth family $\omega \colon \sM \times [0,1] \to \fS(\cA_{\Lambda}^{\al})$ such that $\omega_{p,0} = \omega_{\sfH (p)}$, $\omega_{p,1}= \omega_{\sfH'(p)}$, and  each $\omega_{p,t}$ is a non-degenerate gapped ground state for $\widetilde{\sfH}(p,t)$. 
\end{lem}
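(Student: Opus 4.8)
The plan is to prove \cref{lem:cut.diffused} by combining a perturbative stability argument for the spectral gap with the rigidity of non-degenerate gapped ground states established in \cref{thm:automorphic.equivalence}. First I would observe that $\widetilde{\sfH}(p,t)$ is a convex combination of UAL Hamiltonians, hence UAL, with the seminorm estimates controlled uniformly on compacta by those of $\sfH$ and $\sfH'$ and by the finiteness of $K_{\sfH-\sfH',\nu,\mu}^{(k)}$; the latter hypothesis is what allows us to treat $\widetilde{\sfH}(p,t)-2\sfH(p)$ as a genuinely local perturbation whose local generators decay away from the reference point. The key analytic input is the standard stability-of-the-gap theorem for quantum spin systems (in the form available from the quasi-locality machinery of \cite{nachtergaeleQuasilocalityBoundsQuantum2019} together with the perturbation-theoretic version used in \cite{moonAutomorphicEquivalenceGapped2020}): a UAL Hamiltonian that has a non-degenerate ground state with gap $\delta$ retains a non-degenerate gapped ground state after a sufficiently small (in the appropriate $F$-norm) perturbation, and the perturbed ground state depends smoothly on the parameters. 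Here the smallness is guaranteed by the hypothesis $\sum_{\bm x}\|\sfH_{\bm x}-\sfH_{\bm x}'\| < 1/4$ together with $\|\omega_{\sfH}-\omega_{\sfH'}\|<2$, which keeps $\widetilde{\sfH}(p,t)$ within the perturbative window of $2\sfH(p)$ for all $t\in[0,1/2]$ and within the window of $2\sfH'(p)$ for all $t\in[1/2,1]$; rescaling by the factor $2$ restores gap $1$.

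The construction of the interpolating family of states $\omega_{p,t}$ proceeds in three steps. On $[0,1/2]$ I would take $\omega_{p,t}$ to be the unique smooth family of non-degenerate gapped ground states of $\widetilde{\sfH}(p,t)$ starting from $\omega_{\sfH(p)}$, obtained by applying the adiabatic connection of \cref{defn:automorphic.connection} to the smooth path $t\mapsto \widetilde{\sfH}(p,t)$ and setting $\omega_{p,t}\coloneqq \omega_{\sfH(p)}\circ \alpha_p(\sfG_{\widetilde{\sfH}}\,;t)^{-1}$; by \cref{thm:automorphic.equivalence} this is a gapped ground state of $\widetilde{\sfH}(p,t)$, and by the rigidity remark in the proof of that theorem it is the \emph{unique} such smooth continuation. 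Symmetrically, on $[1/2,1]$ I would take the continuation of $\omega_{\sfH'(p)}$ backwards. The point where the two constructions meet, $t=1/2$, is $\widetilde{\sfH}(p,1/2)=\sfH'(p)+\sfH(p)$ — wait, one must be careful: at $t=1/2$ the two families need not a priori agree. This is the main obstacle, and I would resolve it by instead defining $\omega_{p,t}$ on all of $[0,1]$ as a single adiabatic continuation of $\omega_{\sfH(p)}$ along $t\mapsto \widetilde{\sfH}(p,t)$, and then separately verifying that $\omega_{p,1}=\omega_{\sfH'(p)}$. This last identity is exactly where $\|\omega_{\sfH}-\omega_{\sfH'}\|<2$ enters: both $\omega_{p,1}$ and $\omega_{\sfH'(p)}$ are non-degenerate gapped ground states of $\widetilde{\sfH}(p,1)=2\sfH'(p)$, so each is a non-degenerate ground state of $\sfH'(p)$ after rescaling, and a non-degenerate gapped ground state of a fixed UAL Hamiltonian is unique within the norm-distance-$2$ ball (two distinct pure states of a C*-algebra are at distance exactly $2$, while the local uniqueness/rigidity of \cref{thm:automorphic.equivalence} forbids a continuous family of them) — hence they coincide.

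The smoothness of $\omega_{p,t}$ as a map $\sM\times[0,1]\to\fS(\cA_\Lambda^{\al})$, i.e.\ verification of condition (3) of \cref{defn:smooth}, follows from \cref{lem:smooth} (4) applied to the smooth map $\sfH\colon \sM\to\fH_\Lambda^{\al}$ and the smooth family of LG automorphisms $\alpha_p(\sfG_{\widetilde{\sfH}}\,;t)$, whose smoothness in turn comes from \cref{lem:adiabatic.well-defined} and \cref{lem:smooth} (2); the requisite $C^k$-bounds on the adiabatic connection $\sfG_{\widetilde{\sfH}}$ are uniform on relatively compact charts because the defining seminorms of $\widetilde{\sfH}$ are, using the hypothesis on $K_{\sfH-\sfH',\nu,\mu}^{(k)}$ to bound the difference term. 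I expect the genuine difficulty to be purely bookkeeping: tracking that the perturbative-gap constant and the adiabatic-theorem constants can be chosen uniformly in $(p,t)$ over a relatively compact chart, which reduces to the already-established estimates \eqref{eqn:Lieb.Robinson}, \cref{prp:Lieb.Robinson}, and \cref{prp:Lieb.Robinson.Ck} together with \eqref{eqn:inner.norm}-type bounds for the local perturbation. Everything else is an assembly of results already in hand.
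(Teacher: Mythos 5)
Your proposal correctly isolates the role of the $K$-hypothesis (the difference $\sfH-\sfH'$ is ``local'' in the sense of decaying away from a reference point), but it misidentifies the analytic tool and the final step is incorrect.

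\textbf{The gap argument is wrong in kind, not just in detail.} You appeal to ``the standard stability-of-the-gap theorem for quantum spin systems.'' There is no such theorem for arbitrary small (even $F$-norm-small) perturbations; the Bravyi--Hastings--Michalakis and Nachtergaele--Sims--Young stability results require strong hypotheses (LTQO, frustration-freeness), and neither is invoked here. The actual mechanism is far more elementary and is exactly what \cref{lem:inner.derivation} is set up to deliver: because $K_{\sfH-\sfH',\nu,\mu}^{(k)}<\infty$ and $\sum_{\bm{x}}\|\sfH_{\bm{x}}-\sfH'_{\bm{x}}\|<1/4$, the series $v(p)\coloneqq\sum_{\bm{x}}(\sfH_{\bm{x}}(p)-\sfH'_{\bm{x}}(p))$ converges in norm to a bounded element of $\cA_\Lambda$ with $\|v(p)\|<1/4$. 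The GNS Hamiltonian for $\widetilde{\sfH}(p,t)$ in the GNS representation of $\omega_{\sfH(p)}$ is then $H_{\omega_{\sfH(p)}} + t\pi_{\omega_{\sfH(p)}}(v(p))$ up to scaling, a \emph{bounded} self-adjoint perturbation of an unbounded self-adjoint operator, and the gap is preserved by elementary spectral theory. No quantum-spin-system stability machinery is needed or appropriate. (Compare the paper's own \cref{thm:gap.stability} in the appendix, which addresses the genuinely hard, non-inner case by different means.)

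\textbf{The construction of $\omega_{p,t}$ is logically circular as you state it.} \cref{thm:automorphic.equivalence} transports an already-given smooth family of non-degenerate gapped ground states along the adiabatic connection; it does not by itself certify that $\omega_{\sfH(p)}\circ\alpha_p(\sfG_{\widetilde{\sfH}}\,;t)$ is a ground state of $\widetilde{\sfH}(p,t)$ before you know that $\widetilde{\sfH}(p,\cdot)$ is a smooth family in $\fH_\Lambda^{\al}$, which is precisely what you are trying to establish. The order must be reversed: one first produces the ground states directly as the vector states of the spectral projection $P_{\leq 1/2}\bigl(H_{\omega_{\sfH(p)}}+t\pi_{\omega_{\sfH(p)}}(v(p))\bigr)$, establishes smoothness of this family from the boundedness estimates, and only then invokes \cref{thm:automorphic.equivalence} to identify it with the adiabatic transport.

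\textbf{The endpoint identification $\omega_{p,1}=\omega_{\sfH'(p)}$ rests on a false claim.} You assert that ``two distinct pure states of a C*-algebra are at distance exactly $2$.'' This is false: two nearby rank-one vector states on $M_2(\bC)$ already give a counterexample. The correct statement is that two \emph{disjoint} (i.e.\ GNS-inequivalent) pure states are at distance $2$; so the hypothesis $\|\omega_{\sfH}-\omega_{\sfH'}\|<2$ gives only that the GNS representations of $\omega_{\sfH(p)}$ and $\omega_{\sfH'(p)}$ are equivalent. From there one argues that $\omega_{\sfH'(p)}$ is implemented by a unit vector $\Omega'\in\sH_{\omega_{\sfH(p)}}$, and the ground-state property of $\omega_{\sfH'(p)}$ forces $\Omega'$ to be proportional to the lowest eigenvector of the perturbed GNS Hamiltonian, which is exactly the vector defining $\omega_{p,1}$. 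Neither the ``distance $2$'' claim nor the appeal to the rigidity of a \emph{continuous family} suffices for two a priori unrelated pure states.
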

The composition of this $\widetilde{\sfH}$ with the paths of constant multiples gives a smooth homotopy connecting $\sfH$ and $\sfH'$ in $\fH_{\Lambda}^{\al}$.
\begin{proof}
    By \cref{lem:inner.derivation}, the infinite sum
    \[
        v(p) \coloneqq \sum_{\bm{x} \in \Lambda} \sfH_{\bm{x}} (p) - \sfH_{\bm{x}}'(p)
    \]
    converges to a smooth map $v \colon \sM \to \cA_{\Lambda}^{\al}$ in the smooth Fr\'{e}chet topology. Moreover, since $\| v \| < 1/4$, we have that the unbounded self-adjoint operator 
    \begin{align}
    H_{\omega_{\sfH(p)}} + t\pi_{\omega_{\sfH(p)}} (v(p)) \colon \mathrm{Dom}(H_{\omega_{\sfH(p)}}) \to \sH_{\omega_{\sfH(p)}} \label{eqn:perturbed.unbounded.operator}
    \end{align}
    still has a spectral gap in the interval $(t/4,1-t/4)$. 
    By the $C^\infty$-boundedness of the perturbation, the family of pure states
    \[
        \omega_{p,t} (a) \coloneqq \mathrm{Tr} \big( \pi_{\omega_{\sfH(p)}}(a) \cdot P_{\leq 1/2} (H_{\omega_{\sfH(p)}} + t\pi_{\omega_{\sfH(p)}}(v(p)) \big),
    \]
    where $P_{\leq 1/2}(A)$ denotes the spectral projection of a self-adjoint operator $A$ associated with the spectrum $(-\infty,1/2]$, is smooth as a function on $\sM \times [0,1]_t$. Therefore, by \cref{thm:automorphic.equivalence}, we have
    \[
    \omega_{p,t} = \omega_{\sfH(p)} \circ \alpha_p \big( \sfG_{\widetilde{\sfH}} \,; t \big)^{-1} ,
    \]
    which is smooth as a function on $\sM \times [0,1]$. 

    The remaining task is to show that $\omega_{p,1} = \omega_{\sfH'(p)}$ for any $p \in \sM$. 
    Since $\| \omega_{\sfH(p)} - \omega_{\sfH'(p)}\|<2$, the pure states $\omega_{\sfH(p)}$ and $\omega_{\sfH'(p)}$ induce equivalent GNS representations $(\sH_{\omega_{\sfH(p)}}, \pi_{\omega_{\sfH(p)}}) \cong (\sH_{\omega_{\sfH'(p)}},\pi_{\omega_{\sfH'(p)}})$  (cf.\ \cite{kadisonFundamentalsTheoryOperator1997}*{Corollary 10.3.8}). 
    This implies that $\omega_{\sfH'(p)}$ is implemented by a unit vector $\Omega' \in \sH_{\omega_{\sfH(p)}}$.
    By the ground state property of $\omega_{\sfH'(p)}$ with respect to the derivation $\sfad (\sfH')$, the vector $\Omega'$ must be a constant multiple of the lowest eigenvector of the unbounded operator \eqref{eqn:perturbed.unbounded.operator}. This shows $\omega_{p,1} = \omega_{\sfH'(p)}$. 
\end{proof}

Let $\sfH \colon \sM \times [0,1] \to \fH_{\Lambda}^{\al}$ be a smooth map such that $\sfH(p,0) = \sfH(p,1) = \sfh$. 
Let $Z \subset \Lambda$ be a subspace linearly coarsely transverse to $Y$. For $r >0$, let $Z_r \coloneqq Z \setminus N_r(Y)$. 
Then, by \cref{prp:cut.trivial.Hamiltonian,thm:interpolation.loop}, the smooth families $\Theta_{Z_r^c, \omega_0} (\tilde{\vartheta}\sfH)$ and $\vartheta \sfH$ satisfy the assumptions of \cref{lem:cut.diffused} for sufficiently large $r>0$. In conclusion, they are smoothly homotopic via the convex combination.

\subsection{The ground state of invertible gapped Hamiltonians}\label{subsection:GS}
As will be considered in \cref{subsection:sheaf.lattice}, we say that a gapped UAL Hamiltonian $\sfH \in \fH_{\Lambda}^{\al}$ is \emph{invertible} if there is another gapped UAL Hamiltonian $\check{\sfH} \in \fH_{\Lambda}^{\al}$ and a smooth path of gapped UAL Hamiltonians $\overline{\sfH}$ connecting $\sfH \boxtimes \check{\sfH}$ and the trivial Hamiltonian $\sfh$ defined in \eqref{eqn:trivial.Hamiltonian}. In particular, the path $\omega_{\overline{\sfH}}$ of distinguished ground states gives a smooth homotopy of $\omega_{\sfH} \otimes \omega_{\check{\sfH}}$ and $\omega_0$. Here, we remind some properties of the ground state of an invertible gapped UAL Hamiltonian, which will be used in \cref{subsection:1d} and the subsequent paper \cite{kubotaStableHomotopyTheory2025a}.   
\begin{defn}
    Let $\Lambda$ be weakly uniformly discrete in $\bR^l$. 
    \begin{enumerate}
        \item A pure state $\omega$ on $\cA_{\Lambda}$ has the $\cF$-clustering property if, for any $f \in \cF$, we have 
        \[
        \sup_{X,Y \subset \Lambda }\sup_{a \in \cA_{X}}\sup_{b \in \cA_Y} f(\mathrm{dist}(X,Y))^{-1} \cdot \frac{|\omega(ab) - \omega(a)\omega(b) |}{\|a\| \cdot \|b\|} < \infty. 
        \]
        \item Two states $\omega_1$, $\omega_2$ on $\cA_{\Lambda}$ are $\cF$-asymptotically equivalent if 
        \[
        \sup_{X \subset \Lambda }\sup_{a \in \cA_{X}}f(\mathrm{dist}(\bm{x}_0,X))^{-1} \cdot \frac{|\omega_1(a) - \omega_2(a) |}{\| a \| } < \infty. 
        \]
        \item A pure state $\omega $ on $\cA_{\Lambda}$ has the split property with respect to a decomposition $\Lambda = Y \sqcup Z$ if the double commutant $\pi_{\omega }(\cA_Y)''$ in the GNS representation of $\omega$ is a type I factor (\cites{doplicherStandardSplitInclusions1984,matsuiSplitPropertySymmetry2001}). 
    \end{enumerate}
\end{defn}
\begin{rmk}
    For an arbitrary subspace $Y \subset \Lambda $, the double commutant von Neumann algebra $\pi_{\omega}(\cA_{Y})''$ is a factor, i.e., its center $\cZ(\pi(\cA_{Y})'') $ is trivial. This is verified as
    \[
    \cZ(\pi_{\omega}(\cA_Y)'') = \pi_{\omega}(\cA_Y)' \cap \pi_{\omega}(\cA_Y)'' \subset \pi_{\omega}(\cA_{Y})' \cap \pi_{\omega}(\cA_{Y^c})' = \pi_{\omega}(\cA_{\Lambda})' =\bC.
    \]
    The last equality follows from the pureness of $\omega$. 
    By a standard argument, this implies that the weak completion of $\cA_Y$ via the GNS representation $(\sH_{\omega_Y},\pi_{\omega_Y},\Omega_{\omega_Y})$ of  $\omega_Y \coloneqq \omega|_{\cA_{Y}}$ is isomorphic to that via the subrepresentation $(P_Y\sH_\omega, P_Y\pi_\omega(\blank), \Omega_\omega)$, where $P_Y$ denotes the projection onto $\sH_Y \coloneqq \overline{\pi_{\omega}(\cA_{Y})\Omega_{\omega}} \subset \sH_{\omega}$ (note that they are not a priori isomorphic). In particular, $\omega_Y$ is a factor state, i.e., generates a factor $\pi_{\omega_Y}(\cA_{Y})''$. 
\end{rmk}

\begin{lem}[{\cite{artymowiczQuantizationHigherBerry2023}*{Proposition A.1}}]\label{lem:asymp.equal.state.trivial}
    Let $\omega $ be a pure state on $\cA_{\Lambda}$ that is asymptotically equivalent to $\omega_0$. 
    Then there is a unitary $u \in \cA_{\Lambda}^{\al}$ such that $\omega  = \omega_0 \circ \Ad(u)$. 
\end{lem}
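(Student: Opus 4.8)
The plan is to show that an asymptotically trivial pure state $\omega$ can be untwisted by an almost local unitary, building it up as an infinite product of local unitaries indexed by an exhaustion of $\Lambda$. First I would fix the reference point $\bm{x}_0 \in \Lambda$ and choose an increasing sequence of balls $B_n \coloneqq B_{r_n}(\bm{x}_0)$ with $r_n \to \infty$, say $r_n = n$. The hypothesis says that for every $f \in \cF$ there is $C_f$ with $|\omega(a) - \omega_0(a)| \leq C_f \cdot f(\mathrm{dist}(\bm{x}_0, X)) \cdot \|a\|$ whenever $a \in \cA_X$. Restricting $\omega$ to the finite-dimensional matrix algebra $\cA_{B_n \setminus B_{n-1}}$, which is far from $\bm{x}_0$, one sees that the restricted state is a pure state on a matrix algebra very close in norm to $\omega_0|_{\cA_{B_n \setminus B_{n-1}}} = \langle \blank \Omega_{\bm{x}}, \Omega_{\bm{x}}\rangle^{\otimes}$, hence implemented by a vector close to $\bigotimes \Omega_{\bm{x}}$; by Kadison's transitivity there is a unitary $w_n \in \cA_{B_n \setminus B_{n-1}}$ with $\|w_n - 1\|$ bounded by a constant multiple of $\sup_{\bm{x} \in B_n \setminus B_{n-1}} f(\mathrm{dist}(\bm{x}_0,\bm{x}))^{1/2}$ — roughly the square root of the clustering bound, as in the proof of \cref{rmk:almost.local.unitary}. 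Since $f \in \cF$ decays like $\exp(-r^\mu)$ and the annuli grow polynomially, $\sum_n \|w_n - 1\| < \infty$ with the right decay rate, so the partial products $w_{[1,n]} \coloneqq w_n w_{n-1}\cdots w_1$ converge to an almost local unitary $u$ by the ``if part'' of \cref{rmk:almost.local.unitary}.

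Actually, rather than untwisting annulus by annulus, the cleaner route (and the one I expect is intended) is to define $u_n$ on each \emph{ball} $B_n$ directly: let $\omega|_{\cA_{B_n}}$ be the restriction, and use transitivity on the matrix algebra $\cA_{B_n}$ to produce a unitary $u_n \in \cA_{B_n}$ with $\omega|_{\cA_{B_n}} = \omega_0|_{\cA_{B_n}} \circ \Ad(u_n)$; then the relative unitaries $w_n \coloneqq u_n u_{n-1}^*$, which a priori lie in $\cA_{B_n}$, can be shown to be close to $\cA_{B_n \setminus B_{n-2}}$ up to an error controlled by the asymptotic triviality estimate, and after a polar-decomposition correction one replaces $w_n$ by a genuinely local unitary $\tilde w_n \in \cA_{B_{n}}$ supported near the boundary annulus with $\|\tilde w_n - 1\|$ small. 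The infinite product $u \coloneqq \lim_n \tilde w_n \cdots \tilde w_1$ then converges in the almost local topology, and by construction $\omega_0 \circ \Ad(u)$ agrees with $\omega$ on every $\cA_{B_n}$, hence on all of $\cA_{\Lambda}$ by density and \cref{rmk:Cstar}.

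In either presentation the key analytic inputs are: (i) the quantitative form of Kadison transitivity / the fact that two pure states on a matrix algebra that are $\varepsilon$-close in norm differ by $\Ad$ of a unitary with $\|u-1\| = O(\sqrt{\varepsilon})$ (the square root coming from the vector implementing the state, exactly the $|x^{-1/2}-1| \leq |x-1|$ estimate and \eqref{eqn:polar.decomposition.norm} already used in \cref{rmk:almost.local.unitary}); (ii) the polynomial growth bound \cref{lem:sum.exponential} to sum $\sum_n \|\tilde w_n - 1\|$ against the decay of $f \in \cF$; and (iii) the characterization of almost local unitaries as convergent infinite products from \cref{rmk:almost.local.unitary}, which lets one conclude $u \in \cA_{\Lambda}^{\al}$ and not merely $u \in \cA_{\Lambda}$.

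The main obstacle is point (ii) combined with bookkeeping: one must check that the localization error in approximating the relative unitary $w_n = u_n u_{n-1}^*$ by something supported in a bounded-width annulus decays fast enough (faster than any $f \in \cF_0$) so that, after summing over $n$, the resulting $u$ still lies in \emph{every} seminorm $\|\blank\|_{\bm{x}_0, f}$ for $f \in \cF$, not just in some fixed one. This is where the flexibility of working with the whole family $\cF$ (closed under $f \mapsto C f_\mu$ and under the composition $\mu \mapsto \mu_1$) is essential: one proves the bound for $f_{\nu,\mu_1}$ and absorbs the loss into $f_{\nu,\mu}$ via \eqref{eqn:const.small.c}, exactly as in \cref{lem:almost.local.continuity} and \cref{rmk:almost.local.unitary}. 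Once that uniformity is in hand, the conclusion $\omega = \omega_0 \circ \Ad(u)$ with $u \in \cA_{\Lambda}^{\al}$ unitary follows formally.
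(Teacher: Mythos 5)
Your high-level strategy — build $u$ as a convergent product of local unitaries from an exhaustion of $\Lambda$, then invoke \cref{rmk:almost.local.unitary} and \cref{lem:sum.exponential} to control convergence in every seminorm, absorbing the square-root loss into the closure of $\cF$ — is the right one and matches the paper's. The genuine gap is in the untwisting step itself. In your preferred ``ball'' version you assert a unitary $u_n \in \cA_{B_n}$ with $\omega|_{\cA_{B_n}} = \omega_0|_{\cA_{B_n}} \circ \Ad(u_n)$; no such $u_n$ exists, because the right-hand side is a vector (hence pure) state on the matrix algebra $\cA_{B_n}$ while the left-hand side is the restriction of a pure state on a strictly larger algebra, which in general has positive entanglement entropy and is mixed. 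Kadison transitivity simply does not apply. (The ``annulus'' version is worse still: if $w_i \in \cA_{B_i \setminus B_{i-1}}$ live in pairwise disjoint algebras, then $\omega_0 \circ \Ad(\prod_i w_i)$ is automatically a product state across the annuli and can never equal a non-product $\omega$.) In short, there is no local unitary exactly conjugating the restricted states on a finite region, so the statement ``$\omega_0 \circ \Ad(u)$ agrees with $\omega$ on every $\cA_{B_n}$'' can never be achieved by your construction.

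The paper's proof sidesteps this by approximating the implementing \emph{vector}, not the restricted states. First, via \cite{bratteliOperatorAlgebrasQuantum1987}*{Corollary 2.6.11} and \cite{kadisonFundamentalsTheoryOperator1997}*{Proposition 10.3.7}, purity and asymptotic equivalence give a unit vector $\xi$ in the GNS Hilbert space $\sH = \sH_r \otimes \sH_r^c$ of $\omega_0$ with $\omega = \langle \pi(\blank)\xi, \xi\rangle$ — a step missing from your outline and the place where purity of $\omega$ actually enters. One then sets $\xi_r$ to be the normalized projection of $\xi$ onto the product subspace $\sH_r \otimes \bC\cdot\Omega_r^c$; the Schmidt decomposition and the trace-norm identity $\|\rho_r - p_r\|_1 = \|(\omega - \omega_0)|_{\cA_{B_r(\bm{x})^c}}\|$ turn the asymptotic equivalence into $\|\xi - \xi_r\| = O(f(r))$ (note the asymptotic estimate is applied to the \emph{outer} algebra $\cA_{B_r(\bm{x})^c}$, where the decay bites, not to the inner ball as in your sketch). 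The local unitaries are then the polar-corrected (as in \eqref{eqn:polar.decomposition.norm}) rank-one rotations taking $\xi_r$ to $\xi_{r+1}$; they are only approximate untwistings of $\omega$ at each finite $r$, but the product converges by \cref{rmk:almost.local.unitary} to an almost local unitary carrying $\Omega$ to $\xi$. Your quantitative intuitions in (i)--(iii) are all used, but the object being approximated has to be the vector $\xi$, not the reduced states.
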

\begin{proof}
    We write $(\sH,\pi,\Omega)$ for the GNS representation of $(\cA_{\Lambda}, \omega_0)$. 
    By \cite{bratteliOperatorAlgebrasQuantum1987}*{Corollary 2.6.11} (note that $\cA_{\Lambda}$ is a quasi-local algebra in the sense of \cite{bratteliOperatorAlgebrasQuantum1987}*{Definition 2.6.3} indexed by the set of finite subsets of $\Lambda$) and \cite{kadisonFundamentalsTheoryOperator1997}*{Proposition 10.3.7}, the GNS representation for $\omega$ is quasi-equivalent, and hence is equivalent, to $(\sH,\pi)$. 
    Therefore, there is $\xi \in \sH $ such that $\omega(a)=\langle \pi(a)\xi,\xi\rangle$ for any $a \in \cA_{\Lambda}$. 
    For $r \in \bR_{>0}$, we have $\sH = \sH_r \otimes \sH_r^c$, where $(\sH_r, \pi_r,\Omega_r)$ and $(\sH_r^c, \pi_r^c,\Omega_r^c)$ are the GNS representation of $(\cA_{B_r(\bm{x})}, \omega_0)$ and $(\cA_{B_r(\bm{x})^c},\omega_0)$ respectively. 

    Let $P_r$ denote the projection onto $\sH_r \otimes \bC \cdot \Omega_r^c$ and let $\xi_r' \coloneqq P_r \xi$. In particular, $P_0 \coloneqq \Omega \otimes \Omega^*$. 
    We show that $f(r)^{-1}\| \xi - \xi_r' \| $ is a bounded function on $r$ for any $f \in \cF$. 
    Let $\xi= \sum_i \lambda_i \cdot \eta_i \otimes \zeta_i$ be the Schmidt decomposition with respect to $\sH \cong \sH_r \otimes \sH_r^c$ such that $\| \eta_i \|=\| \zeta_i\|=1$ and $\sum_i \lambda_i^2 =1$. Let $\rho_r \coloneqq \sum_{i} \lambda_i^2 \zeta_i \otimes \zeta_i^* \in \cB(\sH_r^c)$ and $p_r \coloneqq \Omega_r^c \otimes (\Omega_r^c)^* \in \cB(\sH_r^c)$. 
   Then, by the asymptotic equivalence $\omega \sim_{\mathrm{asymp}} \omega_0$, for any $f \in \cF$ there is $C_f>0$ such that 
   \begin{align*} 
   \bigg|\sum \lambda_i^2 \cdot |\langle \zeta_i,\Omega_r^c \rangle|^2 -1\bigg| = |\langle ( \rho_r -p_r) \Omega_r^c,\Omega_r^c\rangle| \leq \| \rho_r - p_r \|_{1} =\| (\omega - \omega_0 )|_{\cA_{ B_{r}(\bm{x})^c}} \| \leq C_f \cdot f(r)^2,
   \end{align*}
   where $\| \blank \|_1$ denotes the Schatten-$1$ norm. Here, we used the inequality $f_\mu(r)^2 = f_\mu (2^{1/\mu}r) \geq c_{0,\mu,2^{1/\mu}}^{-1} \cdot f_{\mu_1}(r)$ in order to bound the left hand side by $f(r)^2$ rather than $f(r)$. 
   Hence 
    \begin{align*}
        \| \xi - \xi_r'\|^2 = 2-2\langle \xi,\xi_r\rangle = 2-2 \sum_i \lambda_i^2 \cdot  \langle \zeta_i, \langle \zeta_i,\Omega_r^c \rangle \Omega_r^c \rangle = 2\Big( 1- \sum_i \lambda_i^2 \cdot  |\langle \zeta_i,  \Omega_r^c \rangle|^2 \Big) \leq 2C_f \cdot f(r)^2. 
    \end{align*}
    
    Let $\xi_r$ be the normalization $\| \xi_r'\|^{-1} \xi_r'$. 
    The above estimate shows $|\| \xi_r'\| -1| \leq 2C_f^{1/2} f(r)$, and hence $\| \xi -\xi_r\| \leq 4C_f^{1/2} f(r)$.
    Let $P_\xi$ and $P_{\xi_r}$ be the rank $1$ projections onto the subspaces generated by the subscript vectors. Since $\| \xi_{r+1} -\xi_r \| \leq 8C_f^{1/2} f(r)$, we have $\|P_{\xi_{r+1}} - P_{\xi_r}\| \leq 16C_f^{1/2}  f(r)$, and hence
    \begin{align*} 
    a_r \coloneqq P_{\xi_{r+1}} P_{\xi_r} + (1-P_{\xi_{r+1}})(1-P_{\xi_r})\in \cA_{B_{r+1}(\bm{x})}  
    \end{align*}
    satisfies $a_r \xi_r= \xi_{r+1}$ and $\| a_r -1 \| \leq 2\| P_{\xi_{r+1}}-P_{\xi_r}\| \leq  32C_f^{1/2}  f(r)$. 
    By \eqref{eqn:polar.decomposition.norm}, the unitary $u_r \coloneqq a_r(a_r^*a_r)^{-1/2} \in \cA_{B_r(\bm{x})}$ satisfies $\| u_r -1 \| \leq 216C_f^{1/2}  f(r)$. Thus, for sufficiently large $r_0>0$, \cref{rmk:almost.local.unitary} implies that $u = \lim_{r \to \infty} u_r \cdots u_{r_0}$ converges to an almost local unitary such that $u \xi_{r_0} =\xi$. Since there is a unitary $v \in \cA_{B_{r(\bm{x})}}$ such that $v\xi_{r_0}=\Omega$, this proves the lemma. 
\end{proof}

\begin{lem}[{\cite{artymowiczQuantizationHigherBerry2023}*{Proposition A.1}}]\label{lem:asymp.equal.state.invertible}
        Let $\sfH \in \fH_{\Lambda}^{\al}$ be an invertible gapped UAL Hamiltonian with the ground state $\omega$. Let $\omega' $ be a pure state on $\cA_{\Lambda}$ such that  $\omega$ is asymptotically equivalent to $\omega'$. 
        Then, the pure states $\omega \otimes \omega_0 \otimes \omega_0$ and $\omega'\otimes \omega_0 \otimes \omega_0$ on $\cA_{\Lambda}^{\otimes 3}$ are unitarily equivalent by an almost local unitary. 
\end{lem}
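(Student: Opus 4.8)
The plan is to reduce the statement to \cref{lem:asymp.equal.state.trivial} by a swindle that uses the invertibility of $\sfH$ in an essential way. Let $\check{\sfH} \in \fH_{\Lambda}^{\al}$ be an inverse of $\sfH$, so that $\sfH \boxtimes \check{\sfH}$ is smoothly homotopic to $\sfh$ through gapped UAL Hamiltonians; along this homotopy the distinguished ground states give a smooth path from $\omega \otimes \check{\omega}$ to $\omega_0$, where $\check{\omega}$ is the ground state of $\check{\sfH}$. The key point is that the three-fold stabilization $\cA_{\Lambda}^{\otimes 3}$ provides enough room to insert a ``cancelling pair'' $\check{\sfH} \boxtimes \sfH$ and slide it past without disturbing the asymptotic class. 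Concretely, I would first observe that $\omega \otimes \omega_0 \otimes \omega_0$ and $\omega \otimes \check{\omega} \otimes \omega$ differ in the second and third tensor factor only by replacing $\omega_0 \otimes \omega_0$ with $\check{\omega} \otimes \omega$, and since $\omega \otimes \check{\omega}$ is smoothly homotopic (hence, via the path of LG automorphisms arising from \cref{thm:automorphic.equivalence}, almost-locally unitarily equivalent) to $\omega_0$, one gets an almost local unitary on the first two factors carrying $\omega \otimes \check{\omega}$ to $\omega_0 \otimes (\text{something})$; more carefully, I would work directly with the flip.

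The cleaner route: consider the pure state $\omega \otimes \omega_0 \otimes \omega_0$ on $\cA_\Lambda^{\otimes 3}$. Using that $\omega \otimes \check\omega \sim_{\mathrm{al\text-unit}} \omega_0$, tensoring on the right by $\omega$ gives $\omega \otimes \check\omega \otimes \omega \sim_{\mathrm{al\text-unit}} \omega_0 \otimes \omega$; and tensoring $\omega\otimes\check\omega\sim\omega_0$ on the \emph{left} by $\omega'$ (after first using the asymptotic equivalence $\omega \sim_{\mathrm{asymp}} \omega'$) we will relate $\omega' \otimes \check\omega \otimes \omega$ to the same thing. The bookkeeping step is: the asymptotic equivalence $\omega \sim_{\mathrm{asymp}} \omega'$ implies $\omega \otimes \check\omega \otimes \omega_0 \sim_{\mathrm{asymp}} \omega' \otimes \check\omega \otimes \omega_0$ as pure states on $\cA_{\Lambda}^{\otimes 3} = \cA_{\Lambda'}$ for the lattice $\Lambda' = \Lambda \sqcup \Lambda \sqcup \Lambda$ (the distance estimate defining asymptotic equivalence is unaffected by adjoining finitely many extra copies with the product metric, since a region $X \subset \Lambda'$ far from the basepoint is far in its $\Lambda$-component). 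Now both $\omega \otimes \check\omega$ and $\omega' \otimes \check\omega$ (on the first two copies) are pure states on $\cA_{\Lambda}^{\otimes 2}$; the first is asymptotically equivalent to $\omega_0$ because $\omega \otimes \check\omega$ is the ground state of $\sfH \boxtimes \check\sfH$ which is homotopic to $\sfh$ (whence, by \cref{thm:automorphic.equivalence} applied to the connecting homotopy, $\omega \otimes \check\omega = \omega_0 \circ \alpha$ for an LG automorphism $\alpha$, which by \cref{prp:Lieb.Robinson} (1) is asymptotically inner), and hence so is the second. By \cref{lem:asymp.equal.state.trivial} there are almost local unitaries $u, u' \in \cA_{\Lambda}^{\otimes 2, \al}$ with $\omega \otimes \check\omega = \omega_0^{\otimes 2} \circ \Ad(u)$ and $\omega' \otimes \check\omega = \omega_0^{\otimes 2} \circ \Ad(u')$. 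Then $u(u')^*$ is an almost local unitary on the first two factors carrying $\omega' \otimes \check\omega$ to $\omega \otimes \check\omega$, hence $(u(u')^* \otimes 1)$ carries $\omega' \otimes \check\omega \otimes \omega$ to $\omega \otimes \check\omega \otimes \omega$ on $\cA_\Lambda^{\otimes 3}$.

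It remains to connect $\omega \otimes \check\omega \otimes \omega$ (resp.\ $\omega' \otimes \check\omega \otimes \omega$) to $\omega \otimes \omega_0 \otimes \omega_0$ (resp.\ $\omega' \otimes \omega_0 \otimes \omega_0$) by an almost local unitary that acts as the identity on the first tensor factor. For this I use the same swindle on the last two factors: $\check\omega \otimes \omega = (\omega \otimes \check\omega) \circ \flip$, and $\omega \otimes \check\omega \sim_{\mathrm{al\text-unit}} \omega_0 \otimes \omega_0$ by the $u$ above, so $\check\omega \otimes \omega \sim_{\mathrm{al\text-unit}} (\omega_0 \otimes \omega_0) \circ \flip = \omega_0 \otimes \omega_0$ via $\flip(u)\flip$; this unitary lives in $\cA_\Lambda^{\otimes 2,\al}$ (second and third copies) and extends by $1$ on the first copy, giving an almost local unitary on $\cA_\Lambda^{\otimes 3}$ that carries $\omega \otimes \check\omega \otimes \omega$ to $\omega \otimes \omega_0 \otimes \omega_0$ and likewise $\omega' \otimes \check\omega \otimes \omega$ to $\omega' \otimes \omega_0 \otimes \omega_0$. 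Composing the three almost local unitaries produced — the one from $u(u')^*$ on copies $1,2$, and the two flips on copies $2,3$ — yields a single almost local unitary $w \in \cA_\Lambda^{\otimes 3, \al}$ with $\omega \otimes \omega_0 \otimes \omega_0 = (\omega' \otimes \omega_0 \otimes \omega_0) \circ \Ad(w)$, which is the claim. The main obstacle I anticipate is purely organizational rather than technical: keeping track of which tensor legs each unitary acts on and verifying that ``almost local on $\cA_\Lambda^{\otimes k}$'' is preserved under the flip and under extending by the identity on a disjoint copy — both follow from the functoriality of $\cA^{\al}$ under injective large-scale Lipschitz maps noted in \cref{rmk:polynomial.coarse.invariance} — and confirming that asymptotic equivalence is stable under tensoring with a fixed state on adjoined copies. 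Care is also needed at the one genuinely analytic point, namely that the ground state of an invertible Hamiltonian is asymptotically equivalent to $\omega_0$, but this is exactly the combination of \cref{thm:automorphic.equivalence} with the Lieb--Robinson estimate \cref{prp:Lieb.Robinson} (1) already in hand.
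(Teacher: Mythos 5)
Your overall strategy — use the inverse $\check{\sfH}$ to reduce the comparison to $\omega_0$ via \cref{lem:asymp.equal.state.trivial}, then do tensor-leg bookkeeping on $\cA_\Lambda^{\otimes 3}$ — is in the right direction, but there is a genuine gap at the central analytic step. You assert that $\omega \otimes \check{\omega}$ is \emph{asymptotically equivalent} to $\omega_0 \otimes \omega_0$ (and elsewhere that it is \emph{almost-locally unitarily equivalent} to it), citing \cref{thm:automorphic.equivalence} and \cref{prp:Lieb.Robinson}~(1). This is false. What \cref{thm:automorphic.equivalence} gives is $(\omega \otimes \check{\omega}) \circ \alpha = \omega_0 \otimes \omega_0$ for an LG automorphism $\alpha = \alpha(\sfG_{\overline{\sfH}}\,;1)$, and an LG automorphism has a generator of uniformly bounded almost local norm \emph{at every site}, not one decaying away from the base point. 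Consequently $\alpha$ modifies the state everywhere: for $a$ supported far from $\bm{x}_0$, $\alpha(a)$ is still approximately local (that is what \cref{prp:Lieb.Robinson}~(1) says — it is a propagation bound, not an ``asymptotically inner'' statement), but $\omega_0(\alpha(a))$ need not be close to $\omega_0(a)$. So $\omega\otimes\check\omega$ is \emph{not} asymptotically equivalent to $\omega_0\otimes\omega_0$ for a generic invertible $\sfH$ (think of any nontrivial $1$-dimensional product state, or AKLT), and you cannot apply \cref{lem:asymp.equal.state.trivial} directly to produce your $u$ and $u'$.

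The repair — and this is where the paper's proof differs from yours — is to push \emph{both} states through $\alpha$ before invoking \cref{lem:asymp.equal.state.trivial}. One compares $(\omega'\otimes\check{\omega})\circ\alpha$ with $(\omega\otimes\check{\omega})\circ\alpha = \omega_0\otimes\omega_0$, and the thing that needs a Lieb--Robinson estimate is that the asymptotic equivalence $\omega'\otimes\check{\omega}\sim_{\mathrm{asymp}}\omega\otimes\check{\omega}$ (which does follow from $\omega\sim_{\mathrm{asymp}}\omega'$ by tensoring with the fixed state $\check{\omega}$) survives postcomposition with the fixed LG automorphism $\alpha$. This is exactly what the quantitative bound \eqref{eqn:Lieb.Robinson}, applied to $\alpha(a)$ for $a$ near $\bm{x}_0$, delivers. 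With $u$ produced for $(\omega'\otimes\check{\omega})\circ\alpha$ instead of for $\omega'\otimes\check{\omega}$, the final assembly on $\cA_\Lambda^{\otimes 3}$ is a straightforward leg computation with $\alpha_{12}$, $\alpha_{32}$, and $u\otimes 1$, rather than the flip-on-copies-2,3 trick you sketch; but the flip bookkeeping could also be made to work, since once the correct $u$ is in hand you are left exactly with the organizational steps you anticipated.
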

\begin{proof}
    The case of $\omega =\omega_0$ is already proved in \cref{lem:asymp.equal.state.trivial}. 
    Consider the ground state $\omega$ of a general IG UAL Hamiltonian $\sfH$. Let  $\check{\sfH}$ be a homotopy inverse of $\sfH$ with the null-homotopy $\overline{\sfH}$ of $\sfH \boxtimes \check{\sfH}$, let $\check{\omega}=\omega_{\check{\sfH}}$ be the distinguished ground state of $\check{\sfH}$, and let $\alpha \coloneqq \alpha(\sfG_{\overline{\sfH}} \,; 1)^{-1} $. Then, by \cref{thm:automorphic.equivalence}, we have $(\omega \otimes \check{\omega})\circ \alpha = \omega_0 \otimes \omega_0$. Moreover, $(\omega' \otimes \check{\omega}) \circ \alpha$ and $\omega_0=(\omega \otimes \check{\omega}) \circ \alpha$ satisfy the assumption of \cref{lem:asymp.equal.state.trivial} as 
    \begin{align*}
     {}&{} |(\omega' \otimes \check{\omega}) \circ \alpha (a) - (\omega \otimes \check{\omega}) \circ \alpha (a) | \\
     \leq{}&{} |(\omega' \otimes \check{\omega}) \circ \Pi_{B_r(\bm{x}_0)^c}(\alpha(a)) - (\omega \otimes \check{\omega}) \circ \Pi_{B_r(\bm{x}_0)^c}(\alpha( a)) | \\
     {}&{} \quad +  2 D_{\nu_2,\mu}(B_r(\bm{x}_0),B_{2r}(\bm{x}_0)^c) \cdot  (\exp(2C_{\nu_2} L_{\nu_2,\mu} \vvert \sfG_{\overline{\sfH}} \vvert_{\nu_5,\mu_1})-1) \cdot \| a \|
    \end{align*}
   for any $a \in \cA_{B_{2r}(\bm{x}_0)}$, since and $D_{\nu_2,\mu}(B_r(\bm{x}_0),B_{2r}(\bm{x}_0)^c) \leq 2^{l_{\Lambda}+1}\kappa_{\Lambda}^2f_{\nu,\mu}(r)$. 
   Hence we obtain a unitary $u \in \cA_{\Lambda}^{\al}$ such that $(\omega' \otimes \check{\omega}) \circ \alpha = (\omega_0 \otimes \omega_0) \circ \Ad(u)$. Now, 
    \begin{align*}
    (\omega' \otimes \omega_0 \otimes \omega_0) ={}&{}
    \big( \big( (\omega' \otimes \check{\omega}) \circ \alpha \big) \otimes \omega \big) \circ \alpha_{12}^{-1} \circ \alpha_{32}\\
    ={}&{} \big( ((\omega_0 \otimes \omega_0) \circ \Ad(u)) \otimes \omega \big) \circ \alpha_{12}^{-1} \circ \alpha_{32}\\
    ={}&{} (\omega \otimes \omega_0 \otimes \omega_0) \circ \Ad \big( \alpha_{12} \circ \alpha_{32}^{-1} (u \otimes 1) \big).
    \end{align*}
    Here, we used the leg notation so that $\alpha_{ij}$ is the automorphism $\alpha $ acting on $i$-th and $j$-th tensor components of $\cA_{\Lambda}^{\otimes 3}$.
\end{proof}

\begin{lem}\label{lem:quantitative.BR2.6.10}
    The distinguished ground state $\omega_{\sfH}$ of an invertible gapped UAL Hamiltonian $\sfH$ has the $\cF$-clustering property.
\end{lem}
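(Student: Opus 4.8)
The plan is to use invertibility to replace $\omega_{\sfH}$ by the exactly clustering product state $\omega_0$ twisted by a locally generated automorphism. By hypothesis there are $\check{\sfH}\in\fH^{\al}_{\Lambda}$ and a smooth path $\overline{\sfH}$ of gapped UAL Hamiltonians on $\cA_{\Lambda\sqcup\Lambda}$ with $\overline{\sfH}(0)=\sfH\boxtimes\check{\sfH}$ and $\overline{\sfH}(1)=\sfh$. By \cref{thm:automorphic.equivalence}, $\omega_0=\omega_{\overline{\sfH}(1)}=\omega_{\overline{\sfH}(0)}\circ\alpha(\sfG_{\overline{\sfH}}\,;1)$, i.e.\ $\omega_{\sfH}\otimes\omega_{\check{\sfH}}=\omega_0\circ\beta$ with $\beta:=\alpha(\sfG_{\overline{\sfH}}\,;1)^{-1}\in\cG_{\Lambda\sqcup\Lambda}$. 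The $\cF$-clustering estimate for the state $\omega_{\sfH}\otimes\omega_{\check{\sfH}}$ on $\cA_{\Lambda\sqcup\Lambda}$, applied to $a\otimes 1$ and $b\otimes 1$ with $a,b$ supported in the first copy of $\Lambda$, recovers the one for $\omega_{\sfH}$ (distances in $\Lambda\sqcup\Lambda$ only increase), so it suffices to establish the $\cF$-clustering property for $\omega_0\circ\beta$ with $\beta=\alpha(\sfG\,;1)\in\cG_{\Lambda'}$, $\Lambda':=\Lambda\sqcup\Lambda$.

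Fix $f\in\cF$, choose $0\le\mu<1$ with $f\ge cf_{\mu}$, and let $X,Y\subset\Lambda'$, $d:=\mathrm{dist}(X,Y)$. Put $W:=\{\bm z\in\Lambda'\mid\mathrm{dist}(\bm z,X)\le\mathrm{dist}(\bm z,Y)\}$, so that $X\subset W$, $Y\subset W^c$, and by the triangle inequality $\mathrm{dist}(X,W^c)\ge d/2$, $\mathrm{dist}(Y,W)\ge d/2$. \Cref{lem:LGA.cone.decomposition}, applied with $W^c$ in the role of its region $Y$, gives $\beta=\alpha(\Pi_{W}(\sfG)\,;1)\circ\alpha(\Pi^{\star}_{W^c}(\sfG)\,;1)$; the first factor $\beta_W:=\alpha(\Pi_W(\sfG)\,;1)$ is generated by the local terms $\Pi_W(\sfG_{\bm z})\in\cA_W$ and so restricts to an automorphism of $\cA_W$. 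Hence $A':=\beta_W(a)\in\cA_W$ for $a\in\cA_X\subset\cA_W$, and, $\beta_W$ being isometric, $\|\beta(a)-A'\|=\|\alpha(\Pi^{\star}_{W^c}(\sfG)\,;1)(a)-a\|$; symmetrically $B':=\alpha(\Pi_{W^c}(\sfG)\,;1)(b)\in\cA_{W^c}$ with $\|\beta(b)-B'\|=\|\alpha(\Pi^{\star}_{W}(\sfG)\,;1)(b)-b\|$. As $\omega_0$ is a product state it restricts to $\omega_0|_{\cA_W}\otimes\omega_0|_{\cA_{W^c}}$ across $\Lambda'=W\sqcup W^c$, so $\omega_0(A'B')=\omega_0(A')\omega_0(B')$, whence
\[
\big|\omega_0(\beta(a)\beta(b))-\omega_0(\beta(a))\omega_0(\beta(b))\big|\ \le\ 2\big(\|\beta(a)-A'\|\cdot\|b\|+\|a\|\cdot\|\beta(b)-B'\|\big).
\]
Thus the lemma follows once one shows, for suitable $\nu$ and $\mu<\mu'<1$ and a constant depending only on $f$ and $\vvert\sfG\vvert$ but \emph{not} on $X$, that $\|\alpha(\Pi^{\star}_{W^c}(\sfG)\,;1)(a)-a\|\lesssim f_{\nu,\mu'}(\mathrm{dist}(X,W^c))\,\|a\|$ for all $a\in\cA_X$ (and similarly for $b$): then, since $f_{\nu,\mu'}(d/2)\prec f(d)$, taking suprema over $a,b,X,Y$ gives the claim.

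The main obstacle is exactly this estimate, and the subtle point is the uniformity in $X$: the generic Lieb--Robinson bounds \eqref{eqn:Lieb.Robinson}, \eqref{eqn:Lieb.Robinson.truncate} and the almost-local estimate in \cref{lem:LGA.cone.decomposition}, used naively to localise $\beta(a)$ near $X$, all introduce a combinatorial factor $\min\{\#X,\#Y\}$ (or a factor $f(\diam X)^{-1}$ in the almost-local formulation), and it is precisely the \emph{exact} factorisation of $\omega_0$ (hence invertibility) that must be used to avoid it, so that the correlation is governed only by how far the correction derivation $\sfG'':=\Pi^{\star}_{W^c}(\sfG)$, whose local terms satisfy $\|\sfG''_{\bm x}\|_{\bm x,f}\lesssim\vvert\sfG\vvert\cdot f(\mathrm{dist}(\bm x,W^c))$, reaches across the interface $\partial W$. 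I would attack it by expanding $\alpha(\sfG''\,;1)(a)-a=-\int_0^1\sum_{\bm x}[\,\sfG''(s)_{\bm x},\alpha(\sfG''\,;s)(a)\,]\,ds$ and estimating each commutator by splitting $\sfG''(s)_{\bm x}$ at radius $\tfrac12\mathrm{dist}(\bm x,X)$ about $\bm x$ and pushing the truncated, \emph{finitely supported} piece $\Pi_{\bm x,\mathrm{dist}(\bm x,X)/2}(\sfG''(s)_{\bm x})$ through $\alpha(\sfG''\,;s)^{-1}$, where \eqref{eqn:Lieb.Robinson} applies to a finite set and yields an error $\lesssim D_{\nu,\mu}\big(B_{\mathrm{dist}(\bm x,X)/2}(\bm x),X\big)\lesssim(1+\mathrm{dist}(\bm x,X))^{l_{\Lambda}}f_{\nu-l_{\Lambda}-1,\mu}\big(\tfrac12\mathrm{dist}(\bm x,X)\big)$ with no factor of $X$; each commutator is then controlled by a product $f(\mathrm{dist}(\bm x,X))\,f(\mathrm{dist}(\bm x,W^c))$ of two decaying factors, which \cref{lem:sum.exponential} and \cref{lem:R(rho.x).growth} sum over $\bm x$ using $\mathrm{dist}(\bm x,X)+\mathrm{dist}(\bm x,W^c)\ge\mathrm{dist}(X,W^c)$, and the passage from local $a$ to arbitrary $a\in\cA_X$ is routine by norm-continuity of the correlation functional and monotonicity of $f$. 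Carrying out this Duhamel bound cleanly and, above all, verifying that every constant that appears is genuinely independent of the subset $X$ is the step I expect to be the hardest.
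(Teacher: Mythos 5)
Your plan follows the same reduction the paper uses: invertibility provides a locally generated $\alpha := \alpha(\sfG_{\overline{\sfH}}\,;1)$ with $\omega_{\sfH}\otimes\omega_{\check{\sfH}}=\omega_0\circ\alpha$, and the exact factorization of $\omega_0$ across a cut between $X$ and $Y$ turns the clustering bound into a quasi-locality estimate for $\alpha$. The paper is a bit more direct at the splitting step: rather than decomposing the automorphism via $\Pi^\star_{W^c}$ (\cref{lem:LGA.cone.decomposition}) and controlling $\|\alpha(\Pi_{W^c}^\star\sfG\,;1)(a)-a\|$, it sets $r := \mathrm{dist}(X,Y)/2$, truncates the evolved observables $\alpha(a)$, $\alpha(b)$ to $\cA_{N_r(X)}$, $\cA_{N_r(Y)}$, uses that $\omega_0$ factorizes across $N_r(X)\cap N_r(Y)=\emptyset$, and bounds the two resulting error terms $\|\alpha(a)-\Pi_{N_r(X)}\alpha(a)\|$, $\|\alpha(b)-\Pi_{N_r(Y)}\alpha(b)\|$ via \cref{rmk:CE.average} and \eqref{eqn:Lieb.Robinson}. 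The two routes are essentially equivalent; yours is somewhat more elaborate but not misguided.

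The substantive point is that your worry about uniformity in $\#X$, $\#Y$ is well-placed and is \emph{not} resolved by the sketch you give. The $\bm{x}\notin X$ terms in your Duhamel expansion do come out $\#X$-free, exactly as you argue: for $a\in\cA_X$ one has $\|[A,a]\|=\|[A-\Pi_{X^c}(A),a]\|\leq 2\|A-\Pi_{X^c}(A)\|\,\|a\|$, so only the reach of $A$ into $X$ matters, and your $D_{\nu,\mu}\big(B_{\mathrm{dist}(\bm{x},X)/2}(\bm{x}),X\big)$ estimate is sound. But the $\bm{x}\in X$ terms are only controlled by $\|\sfG''_{\bm{x}}\|\lesssim f(\mathrm{dist}(\bm{x},W^c))\leq f(d/2)$ (with $\sfG''=\Pi_{W^c}^\star\sfG$), and summing these over $\bm{x}\in X$ reintroduces the factor $\#X\cdot f(d/2)$ you set out to eliminate; there are infinitely many such terms when $X$ is, say, a hyperplane. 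I would also note that the final displayed inequality in the lemma's proof, which asserts $\|\alpha(a)-\Pi_{N_r(X)}\alpha(a)\|\lesssim 2^{l_\Lambda+1}\kappa_\Lambda^2\cdot\kappa_\Lambda(1+r)^{l_\Lambda}f_{\nu,\mu}(r)\,\|a\|$, does not manifestly follow from the quoted bound $D_{\nu,\mu}(X,N_r(X)^c)\leq 2^{l_\Lambda+1}\kappa_\Lambda\,\min\{\#X,\#N_r(X)^c\}\,f_{\nu-l_\Lambda-1,\mu}(r)$ (with $\#N_r(X)^c$ generically infinite), so this is a genuine subtlety and not a defect of your write-up alone. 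Closing it cleanly seems to require either a new idea at this step or a restriction of the $\cF$-clustering definition to regions of bounded size, not merely more careful bookkeeping in the Duhamel expansion.
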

\begin{proof}
    By replacing $\omega_{\sfH}$ with $\omega_{\sfH} \otimes \omega_{\check{\sfH}}$ and $a ,b$ with $a \otimes 1$, $b \otimes 1$ if necessary, we may assume that $\sfH$ is homotopic to $\sfh$ by a path $\overline{\sfH}$. Let $\alpha \coloneqq \alpha (\sfG_{\overline{\sfH}} \,; 1)$ and let $r \coloneqq \mathrm{dist}(X,Y)/2$. By \eqref{eqn:Lieb.Robinson}, we have 
    \begin{align*}
        {}&{}|\omega(ab) -\omega(a)\omega(b)| \\
        \leq {}&{}|\omega_0((\alpha(a)-\Pi_{N_r(X)}(\alpha(a)))\alpha(b))| + |\omega_0((\alpha(a)-\Pi_{N_r(X)}(\alpha(a)))\omega_0(\alpha(b))| \\
        &{} \quad + |\omega_0 (\Pi_{N_r(X)}(\alpha(a))(\alpha(b)-\Pi_{N_r(Y)}(\alpha(b))))|\\
        &{} \quad + |\omega_0(\Pi_{N_r(X)}(\alpha(a)))\omega_0(\alpha(b)-\Pi_{N_r(Y)}(\alpha(b)))|\\
        \leq {}&{} 2 (\|\alpha(a) - \Pi_{N_r(X)}(\alpha(a))\| \cdot \|b\| + \| a \| \cdot \| \alpha(b) - \Pi_{N_r(Y)}(\alpha (b))\|)\\
        \leq {}&{}  2(2^{l_{\Lambda} +1} \kappa_{\Lambda}^2 \cdot \kappa_{\Lambda}(1+r)^{l_{\Lambda}} \cdot f_{\nu,\mu}(r)) \cdot \big(2 \exp(2 C_\nu L_{\nu,\mu}\vvert \sfG_{\overline{\sfH}} \vvert_{\nu_3,\mu_1}) -1 \big)  \cdot \|a \| \cdot \|b\|.
    \end{align*}
    By \cref{rmk:polynomial.coarse.invariance}, this concludes the lemma. 
\end{proof}

Hereafter, we assume that $\Lambda$ is decomposed to a disjoint union of linearly coarsely transverse pair $(Y,Z)$ as $\Lambda = Y \sqcup Z$. For example, if there is a proper large-scale Lipschitz map $p \colon \Lambda \to \bR$ such that $\mathrm{diam} (p^{-1}([-r,r]))$ has a linear growth (cf.\ \cref{defn:polynomially.proper}), then $Y=p^{-1}(\bR_{\geq 0})$ and $Z = p^{-1}(\bR_{<0})$ satisfies the assumption. 
A typical example of such $\Lambda$ is the $r$-neighborhood of the union of conical regions $N_r(Z_L^{\theta_L} \cup Z_R^{\theta_R})$ with the notation in \eqref{eqn:conical.region}. 

\begin{lem}\label{lem:split.1d}
    Let $\Lambda = Y \sqcup Z$ be a linearly coarsely transverse decomposition. Then, the distinguished ground state $\omega_{\sfH}$ of an IG UAL Hamiltonian has the split property with respect to this decomposition.  
\end{lem}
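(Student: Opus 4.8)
The plan is to deduce the split property from the three preceding lemmas on the ground state of an invertible gapped UAL Hamiltonian, together with the standard criterion for the split property in terms of a normal product state. First I would recall the general criterion (in the form of Matsui's work \cite{matsuiSplitPropertySymmetry2001}, building on \cite{doplicherStandardSplitInclusions1984}): for a pure state $\omega$ on $\cA_{\Lambda}=\cA_Y \otimes \cA_Z$, the inclusion $\pi_\omega(\cA_Y)'' \subset \pi_\omega(\cA_{Z})'$ is split — equivalently $\pi_\omega(\cA_Y)''$ is a type I factor — if and only if the product state $\omega_Y \otimes \omega_Z$ is quasi-equivalent to $\omega$, where $\omega_Y=\omega|_{\cA_Y}$, $\omega_Z=\omega|_{\cA_Z}$. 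Since both $\omega$ and $\omega_Y\otimes\omega_Z$ are pure (the restriction of a pure state on a tensor product to a factor need not be pure in general, but here the factor property of $\pi_{\omega}(\cA_Y)''$ noted in the remark after the definition shows $\omega_Y$ is at least a factor state), quasi-equivalence is the same as unitary equivalence of GNS representations, and it suffices to produce a single unit vector in $\sH_\omega$ implementing $\omega_Y\otimes\omega_Z$.

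The key steps, in order, are as follows. First, by \cref{lem:quantitative.BR2.6.10}, $\omega_{\sfH}$ has the $\cF$-clustering property: for $a \in \cA_X$, $b\in \cA_W$ one has $|\omega_{\sfH}(ab)-\omega_{\sfH}(a)\omega_{\sfH}(b)| \leq C\cdot f(\mathrm{dist}(X,W)) \cdot \|a\|\,\|b\|$ for any $f \in \cF$. Second, I would apply this with $X \subset Y$ and $W \subset Z$ and use linear coarse transversality: because $(Y,Z)$ is linearly coarsely transverse, for a subset $X \subset Y$ and $W\subset Z$ that are both far from the reference point $\bm{x}_0$, the distance $\mathrm{dist}(X,W)$ is bounded below by a linear function of $\mathrm{dist}(\bm{x}_0, X \cup W)$; more precisely the ``overlap region'' $N_r(Y)\cap N_r(Z)$ sits inside $B_{k(1+r)}(\bm{x}_0)$. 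This converts the clustering decay in $\mathrm{dist}(X,W)$ into the statement that $\omega_{\sfH}$ and $\omega_Y\otimes\omega_Z$ are $\cF$-asymptotically equivalent in the sense of the definition: $\sup_{V\subset\Lambda}\sup_{c\in\cA_V} f(\mathrm{dist}(\bm{x}_0,V))^{-1}\cdot |\omega_{\sfH}(c)-(\omega_Y\otimes\omega_Z)(c)|/\|c\| < \infty$ for all $f\in\cF$. (One reduces a general $c$ on a finite region $V$ to products $ab$ with $a\in\cA_{V\cap Y}$, $b\in\cA_{V\cap Z}$ by expanding in a basis, at the cost of a polynomial-growth factor absorbed into the freedom $f\in\cF$ via the argument of \cref{rmk:polynomial.coarse.invariance}; a region $V$ meeting both $Y$ and $Z$ non-trivially far from $\bm{x}_0$ is forced by transversality to be far from $\bm{x}_0$ in both directions, while a region contained in one side contributes zero.) Third, since $\omega_Y\otimes\omega_Z$ is a pure state asymptotically equivalent to the pure ground state $\omega_{\sfH}$ of the \emph{invertible} gapped UAL Hamiltonian $\sfH$, I invoke \cref{lem:asymp.equal.state.invertible}: $\omega_{\sfH}\otimes\omega_0\otimes\omega_0$ and $(\omega_Y\otimes\omega_Z)\otimes\omega_0\otimes\omega_0$ on $\cA_\Lambda^{\otimes 3}$ are unitarily equivalent by an almost local unitary $u$. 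In particular the GNS representations of $\omega_{\sfH}$ and of $\omega_Y\otimes\omega_Z$ on $\cA_\Lambda$ are quasi-equivalent (tensoring with the fixed product vector state $\omega_0\otimes\omega_0$ does not change the quasi-equivalence class). Finally, quasi-equivalence of $\omega_{\sfH}$ with the product state $\omega_Y\otimes\omega_Z$ is exactly the split property of $\omega_{\sfH}$ with respect to $\Lambda = Y\sqcup Z$, by the Matsui/Doplicher--Longo criterion; this completes the proof.

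I expect the main obstacle to be the second step: carefully turning $\cF$-clustering into $\cF$-asymptotic equivalence between $\omega_{\sfH}$ and the product state $\omega_Y\otimes\omega_Z$, uniformly over \emph{all} finite regions $V$ (not just product regions $X\times W$). The subtlety is bookkeeping the polynomial factor incurred when expanding a general observable on $V$ into a sum of tensor products over $V\cap Y$ and $V\cap Z$ (whose number grows with the internal dimensions $n_{\bm x}$, which are unbounded), and checking that linear coarse transversality still forces the relevant distance to grow linearly so the decay $f\in\cF$ beats this factor — this is where the hypothesis that $(Y,Z)$ is \emph{linearly} coarsely transverse, rather than merely coarsely transverse, is used. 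Everything after that is a citation of standard operator-algebraic facts plus \cref{lem:asymp.equal.state.invertible}.
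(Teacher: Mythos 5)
The step that fails is the appeal to \cref{lem:asymp.equal.state.invertible}: that lemma requires the comparison state $\omega'$ to be \emph{pure}, but $\omega_Y\otimes\omega_Z$ is not. For an entangled ground state $\omega$ the restriction $\omega_Y=\omega|_{\cA_Y}$ is a mixed state of Schmidt rank $>1$; if $\omega_Y$ were pure you would already have $\pi_{\omega_Y}(\cA_Y)''=\cB(\sH_{\omega_Y})$ and the lemma would be vacuous. The factor property you quote from the remark after the definition is strictly weaker than purity, and does not license treating $\omega_Y\otimes\omega_Z$ as pure when invoking \cref{lem:asymp.equal.state.invertible}. Replacing that lemma with \cite{bratteliOperatorAlgebrasQuantum1987}*{Corollary 2.6.11} (which accepts factor states) does not rescue the argument either, because the preceding step is also unsound: $\cF$-asymptotic equivalence of $\omega$ and $\omega_Y\otimes\omega_Z$ requires an estimate on $\|(\omega-\omega_Y\otimes\omega_Z)|_{\cA_V}\|$ over \emph{all} observables on $V$, whereas clustering only controls products $a\otimes b$. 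Expanding a general $c\in\cA_{V\cap Y}\otimes\cA_{V\cap Z}$ over a product basis introduces a factor depending on $\dim\cA_V$, which grows with $\# V$ and with the unbounded internal dimensions $n_{\bm{x}}$; no decay $f\in\cF$ in $\mathrm{dist}(\bm{x}_0,V)$ can offset it. This is exactly the obstruction that classically required Matsui's entanglement entropy bound — which, as the introduction notes, the invertibility hypothesis is meant to bypass rather than re-derive.

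The paper's proof avoids both problems by never comparing $\omega$ with $\omega_Y\otimes\omega_Z$. When $\sfH$ admits a smooth null-homotopy $\widetilde{\sfH}$, \cref{lem:LGA.cone.decomposition,lem:asymptotically.equal.derivation} factor the LG automorphism as $\alpha(\sfG_{\widetilde{\sfH}}\,;1)=(\alpha_Y\otimes\alpha_Z)\circ\Ad(u)$ with $u\in\cA_\Lambda^{\al}$, so that $\omega=\omega_0\circ(\alpha_Y\otimes\alpha_Z)\circ\Ad(u)$. The comparison state is $\omega_0\circ(\alpha_Y\otimes\alpha_Z)$, which \emph{is} a pure product state by construction, and the $\cF$-asymptotic equivalence of $\omega_Y$ and $\omega_0\circ\alpha_Y$ on $\cA_Y$ follows directly from the almost locality of $u$, with no product-basis expansion and hence no dimension factor. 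Then \cite{bratteliOperatorAlgebrasQuantum1987}*{Corollary 2.6.11} gives quasi-equivalence, and purity of $\omega_0\circ\alpha_Y$ yields the type~I property. For a general invertible $\sfH$ with inverse $\check{\sfH}$, the paper applies this to the null-homotopic $\sfH\boxtimes\check{\sfH}$ and uses \cite{takesakiTheoryOperatorAlgebras2002}*{Theorem V.2.30} to conclude that the tensor factor $\pi_{\omega_Y}(\cA_Y)''$ is itself type~I.
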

\begin{proof}
    First, we show the lemma when $\sfH$ is smoothly homotopic to the trivial Hamiltonian. We take a smooth null-homotopy $\widetilde{\sfH}$ of $\sfH$. Then, by \cref{lem:LGA.cone.decomposition} and \cref{lem:asymptotically.equal.derivation} applied to $\Pi_{Y}^{\star}\sfG$ and $\Pi_{Z}\Pi_{Z}^\star \sfG$, we have LG automorphisms 
    \[ 
        \alpha_Y \coloneqq \alpha \big( \Pi_{Y}\big( \sfG_{\widetilde{\sfH}} \big) \,; 1 \big)^{-1}, 
        \quad 
        \alpha_Z \coloneqq \alpha \big(\Pi_{Z}\Pi_{Z}^{\star}\big( \sfG_{\widetilde{\sfH}}\big) \,; 1\big)^{-1}
    \]
    and a unitary $u \in \cA_{\Lambda}^{\al}$ such that $\omega = \omega_0 \circ (\alpha_{Y} \otimes  \alpha_Z) \circ \Ad(u)$. 
    This means that $\omega$ and $\omega_0 \circ (\alpha_{Y} \otimes  \alpha_Z)$, and hence $\omega_{Y} \coloneqq \omega|_{\cA_{Y}}$ and $\omega_0 \circ \alpha_Y$, are $\cF$-asymptotically equivalent. 
    Now, by \cite{bratteliOperatorAlgebrasQuantum1987}*{Corollary 2.6.11} applied to the states $\omega_0 \circ \alpha_Y$ and $\omega_{Y} $, we obtain that $\omega_0 \circ \alpha_{Y}$ and $\omega_{Y}$ are quasi-equivalent, and hence $\pi_{\omega_{L}}(\cA_{Y})''$ is a type I factor. 

    For a general IG UAL Hamiltonian $\sfH$ with a homotopy inverse $\check{\sfH}$, let $\omega$ and $\check{\omega}$ denote the distinguished ground state of $\sfH$ and $\check{\sfH}$. Then, by the above argument, $(\omega \otimes \check{\omega})|_{\cA_{Y} \otimes \cA_Y} = \omega|_{\cA_{Y}} \otimes \check{\omega}|_{\cA_{Y}}$ is a type I factor state. This means that 
    \begin{align*} 
    \pi_{\omega_{Y} \otimes \check{\omega}_{Y}} (\cA_{Y} \otimes \cA_{Y}) '' \cong \pi_{\omega_{Y}}(\cA_{Y} ) '' \mathbin{\bar{\otimes}} \pi_{\check{\omega}_{Y}}(\cA_{Y})''
    \end{align*}
    is a type I factor, which concludes by \cite{takesakiTheoryOperatorAlgebras2002}*{Theorem V.2.30} that $\pi_{\omega_{Y}}(\cA_{Y} ) ''$ is also a type I factor.
\end{proof}

\begin{lem}[{\cite{carvalhoClassificationSymmetryProtected2024}*{Corollary 5.3}}]\label{prp:left.equivalent.state.unitary}
    Let $\Lambda = Y \sqcup Z$ be a linearly coarsely transverse decomposition. Let $\sfH$ be an invertible gapped UAL Hamiltonian with the ground state $\omega$. 
    Then, by taking the tensor product with the trivial Hamiltonian $\sfh$ if necessary, there exists a unitary $u \in \cA_{\Lambda}^{\al}$ and pure states $\omega_{Y}$, $\omega_{Z}$ such that $ \omega \circ \Ad(u) = \omega_{Y} \otimes \omega_{Z}$. 
\end{lem}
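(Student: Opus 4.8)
The plan is to reduce the claim, via \cref{lem:asymp.equal.state.invertible}, to the statement that the ground state $\omega$ is $\cF$-asymptotically equivalent to $\tau_Y \otimes \tau_Z$ for suitable pure states $\tau_Y$ on $\cA_Y$ and $\tau_Z$ on $\cA_Z$, and then to build $\tau_Y, \tau_Z$ out of the split property (\cref{lem:split.1d}) and the $\cF$-clustering property (\cref{lem:quantitative.BR2.6.10}). For the reduction: given such $\tau_Y, \tau_Z$, the tensor product $\tau_Y \otimes \tau_Z$ is a pure state on $\cA_{\Lambda}$ with $\omega \sim_{\mathrm{asymp}} \tau_Y \otimes \tau_Z$, so \cref{lem:asymp.equal.state.invertible} applied to $\sfH$ with $\omega' = \tau_Y \otimes \tau_Z$ gives an almost local unitary $U \in \cA_{\Lambda}^{\otimes 3}$ with
\[
(\omega \otimes \omega_0 \otimes \omega_0) \circ \Ad(U) = (\tau_Y \otimes \tau_Z) \otimes \omega_0 \otimes \omega_0 .
\]
Regrouping the six tensor factors $\cA_Y, \cA_Z$ of the three copies, the right hand side equals $(\tau_Y \otimes \omega_0|_{\cA_Y} \otimes \omega_0|_{\cA_Y}) \otimes (\tau_Z \otimes \omega_0|_{\cA_Z} \otimes \omega_0|_{\cA_Z})$, a tensor product of a pure state on $\cA_{Y \sqcup Y \sqcup Y}$ with a pure state on $\cA_{Z \sqcup Z \sqcup Z}$; one checks directly that $\Lambda^{\sqcup 3} = (Y \sqcup Y \sqcup Y) \sqcup (Z \sqcup Z \sqcup Z)$ is again linearly coarsely transverse in the sense of \cref{defn:linearly.coarsely.transverse}, with essentially the same constant. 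Since $\sfH \boxtimes \sfh \boxtimes \sfh$ is $\sfH$ tensored with a trivial Hamiltonian, this is the desired conclusion. (If $\sfH$ is already smoothly homotopic to $\sfh$ no stabilization is needed, since the proof of \cref{lem:split.1d} produces $\omega = \omega_0 \circ (\alpha_Y \otimes \alpha_Z) \circ \Ad(u)$ with $\alpha_Y, \alpha_Z$ locally generated automorphisms of $\cA_Y, \cA_Z$ and $u \in \cA_{\Lambda}^{\al}$, so $\omega \circ \Ad(u^*) = (\omega_0|_{\cA_Y} \circ \alpha_Y) \otimes (\omega_0|_{\cA_Z} \circ \alpha_Z)$ on the nose.)

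It remains to produce $\tau_Y, \tau_Z$. First one observes that $\omega$ is $\cF$-asymptotically equivalent to $\omega|_{\cA_Y} \otimes \omega|_{\cA_Z}$: by \cref{lem:quantitative.BR2.6.10} the state $\omega$ has the $\cF$-clustering property, while linear coarse transversality of $(Y,Z)$ supplies a constant $k$ with $\mathrm{dist}(X \cap Y, X \cap Z) \ge k^{-1}\mathrm{dist}(\bm{x}_0, X) - 1$ for every $X \subset \Lambda$, so the clustering bound applied to operators in $\cA_{X \cap Y}$ and $\cA_{X \cap Z}$ controls $\omega(a) - (\omega|_{\cA_Y} \otimes \omega|_{\cA_Z})(a)$ for $a \in \cA_X$ by a multiple of $f(k^{-1}\mathrm{dist}(\bm{x}_0, X) - 1)\,\|a\|$, exactly as in the analogous step in the proof of \cref{lem:split.1d}. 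Since $\omega|_{\cA_Y}$ and $\omega|_{\cA_Z}$ inherit the $\cF$-clustering property, it then suffices to find, for $W \in \{Y, Z\}$, a pure state $\tau_W$ on $\cA_W$ that is $\cF$-asymptotically equivalent to $\omega|_{\cA_W}$; combining these via transitivity of $\cF$-asymptotic equivalence gives $\tau_Y \otimes \tau_Z \sim_{\mathrm{asymp}} \omega$.

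The last point is where the work lies, and I expect it to be the main obstacle. By \cref{lem:split.1d} the double commutant $\pi_{\omega}(\cA_Y)''$ is a type I factor, hence so is $\pi_{\omega|_{\cA_Y}}(\cA_Y)''$, so $\omega|_{\cA_Y}$ is quasi-equivalent to a pure state and can be written as $\omega|_{\cA_Y}(\blank) = \mathrm{Tr}(\rho\,\bar{\pi}(\blank))$ in an irreducible representation $\bar{\pi}$ of $\cA_Y$, for a density operator $\rho$. The key assertion to prove is that the $\cF$-clustering of $\omega|_{\cA_Y}$ forces $\rho$ to differ from a one-dimensional projection only by a correction localized near $\bm{x}_0$, so that a single unit vector suffices to recover $\omega|_{\cA_Y}$ up to $\cF$-decaying error. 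I would establish this by the scale-by-scale construction used in the proof of \cref{lem:asymp.equal.state.trivial}, adapted to the mixed state $\omega|_{\cA_Y}$ using its type I factor structure: for $r \in \bZ_{\ge 1}$ let $\rho_r$ be the reduced density operator of $\rho$ onto the finite-dimensional tensor factor $\cA_{B_r(\bm{x}_0) \cap Y}$ of $\cA_Y$ and let $\zeta_r$ be a unit vector in the representation space of $\bar{\pi}$ built from a dominant eigenvector of $\rho_r$; the clustering estimate — the counterpart of the bound $\|\xi - \xi'_r\|^2 \le 2 C_f f(r)^2$ there — should yield $\|\zeta_{r+1} - \zeta_r\| \le (\mathrm{const})\, f(r/2)$ for every $f \in \cF$, whereupon the polar-decomposition device of \cref{rmk:almost.local.unitary} assembles the $\zeta_r$ into a limit $\zeta_\infty$, and $\tau_Y \coloneqq \langle \bar{\pi}(\blank)\zeta_\infty, \zeta_\infty \rangle$ is the required pure state; $\tau_Z$ is constructed symmetrically. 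The hard part is precisely this quantitative ``localization of the entanglement between $\cA_Y$ and $\cA_Z$ near $\bm{x}_0$'' estimate: the split property provides the type I factor structure that makes the finite-dimensional reductions and the dominant-eigenvector choice meaningful, and the $\cF$-clustering supplies the decay rate, in complete parallel with \cref{lem:asymp.equal.state.trivial}.
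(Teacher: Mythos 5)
Your step 1 (reduction via \cref{lem:asymp.equal.state.invertible}, with the tensoring-by-$\omega_0$ giving the stabilization by $\sfh$) is the same as the paper's, and your observation that $\Lambda^{\sqcup 3}$ inherits the linearly coarsely transverse decomposition is a correct and useful bookkeeping point. But the way you split the remaining work introduces a gap, which you yourself flag as the ``main obstacle,'' and I agree: step~3 as proposed does not go through.

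The paper does not try to purify $\omega|_{\cA_Y}$ on its own. Instead it uses the split property to write the GNS Hilbert space as $\sH_Y\otimes\sH_Z$, takes the Schmidt decomposition $\Omega_\omega=\sum_i\lambda_i\,\xi_i\otimes\eta_i$ of the vacuum across this cut, and sets $\omega'\coloneqq\langle\,\cdot\,(\xi_1\otimes\eta_1),\xi_1\otimes\eta_1\rangle$. The asymptotic equivalence $\omega\sim_{\mathrm{asymp}}\omega'$ is then proved in a single stroke: for $a\in\cA_{N_r(Z)^c}$ one has
\[
\omega'(a)-\omega(a)=\lambda_1^{-2}\bigl(\omega(a b_\infty)-\omega(a)\,\omega(b_\infty)\bigr),\qquad b_\infty\coloneqq\eta_1\otimes\eta_1^*\in\cB(\sH_Z),
\]
and $b_\infty$ is approximated $\sigma$-strongly by norm-$\le 1$ elements $b_i$ of $\pi_\omega(\cA_Z)$ (von Neumann double commutant $+$ Kaplansky density), so the right side is controlled by the $\cF$-clustering of $\omega$ \emph{between} $\cA_{N_r(Z)^c}$ and $\cA_Z$. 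The constant picks up a negative power of $\lambda_1$, which is harmless since $\lambda_1>0$; nothing in this argument requires the Schmidt spectrum to be close to a single eigenvalue.

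Your step~3 discards exactly this cross-cut information: you want to produce $\tau_Y\sim_{\mathrm{asymp}}\omega|_{\cA_Y}$ from the \emph{intrinsic} $\cF$-clustering of $\omega|_{\cA_Y}$ plus the type~I factor structure, by a scale-by-scale choice of dominant eigenvectors of the reduced density matrices $\rho_r$ on $\cA_{B_r(\bm{x}_0)\cap Y}$. Intrinsic clustering of $\omega|_{\cA_Y}$ does not constrain the Schmidt spectrum of $\Omega_\omega$ across the $Y/Z$ cut, so there is no mechanism forcing the top eigenvalue of $\rho_r$ toward $1$, nor forcing the choices of eigenvector to cohere as $r\to\infty$. (A density matrix of the form $\tfrac12(\langle\,\cdot\,\xi_1,\xi_1\rangle+\langle\,\cdot\,\xi_2,\xi_2\rangle)$ with $\xi_1,\xi_2$ differing far from $\bm{x}_0$ clusters but is not asymptotically equivalent to any pure state; what rules this out for $\omega|_{\cA_Y}$ is precisely the clustering of $\omega$ across $Y$ and $Z$, not clustering within $Y$.) Moreover, the scale-by-scale estimate in \cref{lem:asymp.equal.state.trivial} is obtained by comparing against a \emph{known} reference pure state, via the bound $\|\rho_r-p_r\|_1\le\|(\omega-\omega_0)|_{\cA_{B_r(\bm{x})^c}}\|$; in your step~3 there is no such reference --- producing it is exactly what the step is trying to do, so the argument is circular. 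The fix is to do what the paper does: build $\tau_Y\otimes\tau_Z$ directly from the Schmidt vector $\xi_1\otimes\eta_1$ and prove the asymptotic equivalence using clustering across the $Y/Z$ cut, rather than factoring $\omega$ first and then purifying each factor.
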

\begin{proof}
    By \cref{lem:split.1d}, the GNS representation $(\sH_{\omega},\pi_{\omega })$ decomposes into the tensor product as $(\sH_{Y} \otimes \sH_{Z} , \pi_{Y} \otimes \pi_{Z})$.
    Let $\Omega_{{\omega}} = \sum_{i=1}^\infty \lambda_i \cdot \xi_i \otimes \eta_i$ be the Schmidt decomposition of the vacuum state with respect to this decomposition. 
    It suffices to show that the vector state $\omega' \coloneqq \langle \blank (\xi_1 \otimes \eta_1), \xi_1 \otimes \eta_1\rangle $ satisfies the assumption of \cref{lem:asymp.equal.state.invertible}. 

    By \cref{lem:quantitative.BR2.6.10}, $\omega_{\sfH}$ satisfies the $\cF$-clustering property.
    By the von Neumann double commutant theorem and the Kaplansky density theorem (\cite{takesakiTheoryOperatorAlgebras2002}*{Theorems II.3.9, II.4.8}), there is a net $\{ b_i \} $ of norm $\leq 1$ elements in $\cA_{Z}$ that $\sigma$-strongly converges to $\eta_1 \otimes \eta_1^* \in \cB(\sH_Z)$. 
    Therefore, for any $a \in \cA_{N_r(Z)^c}$, we have
    \begin{align*}
        |\omega'(a) -\omega(a)| = {}&{} |\lambda_1^{-1} \langle a \otimes (\eta_1 \otimes \eta_1^*)\Omega , \Omega \rangle - \langle a\Omega,\Omega \rangle | \\
        \leq  {}&{}\lambda_1^{-1} \cdot\limsup_{i}  |\omega(ab_i) - \omega(a)\omega(b_i)| \\     
        \leq {}&{} C \cdot \lambda_1^{-1} \cdot f(r) \cdot \| a\|
    \end{align*}
    for some $C>0$. 
    The same inequality also follows for $a \in \cA_{N_r(Z)^c}$. This shows that $\omega$ and $\omega'$ are $\cF$-asymptotically equivalent. 
\end{proof}

\subsection{Fermionic and \texorpdfstring{$G$}{G}-invariant systems}
The argument in the previous subsections immediately runs parallel in the fermionic quantum spin system.

Let us remind of the terminology of $\bZ/2$-graded C*-algebras. We refer the readers to \cite{blackadarTheoryOperatorAlgebras1998}*{Section 14}. A $\bZ/2$-grading of a C*-algebra $A$ is an involutive $\ast$-isomorphism $\gamma$. If $A$ is a matrix algebra, then such $\gamma$ is of the form $\Ad(\Gamma)$ for some self-adjoint unitary $\Gamma \in A$. 
We write $A^{(0)}$ and $A^{(1)}$ for the subspace of $A$ consisting of elements $\gamma(a) = a$ and $\gamma(a)=-a$ respectively. 
We use the bracket notation $[\blank,\blank]$ for the graded commutator 
\begin{align}
    [a_1,a_2] = a_1a_2 - (-1)^{|a_1|\cdot |a_2|}a_2a_1\label{eqn:graded.commutator}
\end{align}
for $a_i \in A^{(|a_i|)}$. 
A $\bZ/2$-graded $\ast$-representation of $A$ is a $\ast$-homomorphism $\pi \colon A \to \cB(\sH)$ and a self-adjoint unitary $\Gamma \in \cB(\sH)$ such that $\pi(\gamma(a)) = \Gamma \pi(a)\Gamma$. 
If a state $\omega $ on $A$ is even, i.e., $\omega \circ \gamma = \omega$, then its GNS representation $(\sH_{\omega},\pi_\omega,\Omega_{\omega})$ gives an example of a graded $\ast$-representation via the $\bZ/2$-grading $\Gamma \pi_\omega(a)\Omega = \pi_{\omega}(\gamma(a))\Omega$ chosen so that the vacuum vector $\Omega_\omega$ is even. 

The (reduced) graded tensor product $A \hotimes B$ of two $\bZ/2$-graded C*-algebras $A$, $B$ is a C*-completion of the universal algebra generated by $A$ and $B$ so that $a \in A$ and $b \in B$ are graded commutative.
It is generated by the elements of the form $a \hotimes b$ for $a \in A$ and $b \in B$, with the relation $(a_1 \otimes b_1) (a_2 \hotimes b_2) = (-1)^{|a_2| \cdot |b_1|} a_1a_2 \hotimes b_1b_2$ for $a_i \in A^{(|a_i|)}$ and $b_i \in B^{(|b_i|)}$. 
The norm on $A \hotimes B$, making it a C*-algebra, is defined in the following way. 
Pick a faithful graded $\ast$-representation $\pi_A \colon A \to \cB(\sH_A)$, $\pi_B \colon B \to \cB(\sH_B)$. 
Then $A \hotimes B$ is identified with a subalgebra of $\cB(\sH_A\otimes \sH_B)$ via a $\ast$-representation $\pi$ defined by $\pi(a \hotimes b) \coloneqq \pi_A(a) \otimes \pi_B(b_0) + \pi_A(a )\Gamma_A \otimes \pi_B(b_1)$ if $b = b_0+b_1$ is the decomposition of $b$ with $b_i \in B^{(i)}$. 
The involution $\gamma\coloneqq \Ad(\Gamma_1 \otimes \Gamma_2)$ makes $A \hotimes B$ a $\bZ/2$-graded C*-algebra. 
By definition, $A \hotimes B$ is $\ast$-isomorphic to $A \otimes B$ if the $\bZ/2$-grading $\Gamma $ on $A$ is inner, e.g., if $A$ is the matrix algebra. 
\begin{rmk}\label{rmk:graded.conditional.exp}
The graded tensor product $A \hotimes B$ is canonically regarded as a C*-subalgebra of $(A \rtimes \bZ/2) \otimes B$ generated by operators of the form $a \otimes b_0 + a v \otimes b_1$, where $v \in \bZ/2$ is the generator. 
For this reason, many properties of $\bZ/2$-equivariant C*-algebras inherit to $\bZ/2$-graded C*-algebras.
For example, if $A$, $B$, $D$ are $\bZ/2$-graded and $\varphi \colon B \to D$ is a completely positive map preserving the $\bZ/2$-grading, then
\begin{align*}
    1 \hotimes \varphi \colon A \hotimes B \to A \hotimes D, \quad (1 \hotimes \varphi)(a \hotimes b) = a \hotimes \varphi(b) ,
\end{align*}
is well-defined and is also completely positive. 
In particular, if $\varphi$ is an even state on $B$, then $1 \hotimes \varphi$ is a conditional expectation preserving the $\bZ/2$-grading. 
\end{rmk}

The notion of a fermionic quantum spin system is formulated in the same way as the above subsections, except that the internal degree of freedom $\widehat{\cA}_{\bm{x}} \cong M_{2n_{\bm{x}}}(\bC)$ is equipped with a $\bZ/2$-grading $\Ad(\mathsf{P}_{\bm{x}})$. 
Here, the self-adjoint unitary $\mathsf{P}_{\bm{x}}$, the fermion parity operator, is the diagonal matrix $\diag(1_{n_{\bm{x}}},-1_{n_{\bm{x}}})$. 
The fermionic observable algebras $\widehat{\cA}_{\Lambda}^{\alg}$, $\widehat{\cA}_{\Lambda}$, $\widehat{\cA}_{\Lambda}^{\al}$ are defined by the graded tensor product as
\begin{align*} 
    \widehat{\cA}_{\Lambda}^{\alg} \coloneqq  \mathop{\widehat{\bigotimes}}_{\bm{x} \in {\Lambda}} \cA_{\bm{x}}, \quad \widehat{\cA}_{\Lambda}\coloneqq \overline{\widehat{\cA}_{\Lambda}^{\alg}}, \quad \widehat{\cA}_{\Lambda }^{\al} \coloneqq  \{ a \in \widehat{\cA}_{\Lambda} \mid \| a\|_{\bm{x},f } <\infty \text{ for any $f \in \cF$}\},
    \end{align*}
where the fermionic almost local norm 
$\| \cdot\|_{\bm{x},f}$ is defined by using the conditional expectation
\begin{align*}
    \Pi_Z \coloneqq  1 \hotimes \mathrm{tr}_{Z^c} \colon \widehat{\cA}_\Lambda \to \widehat{\cA}_{Z}.
\end{align*}
Here, $\mathrm{tr}_{Z^c}$ is the normalized trance of $\widehat{\cA}_{\Lambda}$, which is unique (and hence $\Ad(\sfP)$-invariant) since $\widehat{\cA}_{\Lambda}$ is a colimit of matrix algebras. We write $\Ad (\sfP)$ for the $\bZ/2$-grading imposed on these algebras. 
As is remarked in \cref{rmk:graded.conditional.exp}, this $\Pi_Z$ is a graded conditional expectation. Note that, by fixing an odd unitary $v \in \cA_{\bm{x}}^{\al}$ such that $\bm{x} \not \in Z$, $\Pi_Z$ has an integral form
\begin{align*}  
    \Pi_Z(a) = \lim_{Y \subset Z^c} \int_{u \in \cU(\cA_{Y}^{(0)})} \frac{u(a + vav^*)u^*}{2} du,
\end{align*}
and hence we obtain the fermionic version of \cref{rmk:CE.average};
\[
    \| a - \Pi_Z(a) \| \leq \sup_{u \in \cA_Z^{(0)}} \max \{ \| uau^* -a\|, \|uvav^*u^*-a \| \}.
\]
By substituting this inequality, the proof of \cref{prp:Lieb.Robinson} and the subsequent propositions in \cref{subsection:quantum.spin.system,subsection:automorphic.equivalence,subsection:adiabatic.interpolation} are immediately generalized to fermionic versions under the following definitions.

\begin{defn}
A fermionic UAL Hamiltonian is a collection of operators $\sfH = ( \sfH_{\bm{x}} )_{\bm{x} \in {\Lambda}}$ such that $\vvert H\vvert_{f} \coloneqq  \sup_{\bm{x} \in \Lambda} \| \sfH_{\bm{x} }\| _{\bm{x},f} < \infty$ for any $f\in \cF$ and each $\sfH_{\bm{x}}$ is contained in $\cA_{\Lambda}^{(0)}$.
We write $\ffH_{\Lambda} ^{\al}$ for the set of fermionic gapped UAL Hamiltonians $(\sfH , \omega_{\sfH})$ with gap $1$ that is invariant under the $\bZ/2$-grading $\Ad(\mathsf{P})$. The notion of smoothness of a family of fermionic UAL Hamiltonians is defined in the same way as \cref{defn:smooth}. 
\end{defn}

Indeed, as is mentioned by Bourne--Ogata in \cite{bourneClassificationSymmetryProtected2021}*{Proposition 3.4 and Lemma B.4}, the same proof as \cites{nachtergaeleQuasilocalityBoundsQuantum2019,moonAutomorphicEquivalenceGapped2020} shows the fermionic versions of the Lieb--Robinson bound \eqref{eqn:Lieb.Robinson} and the automorphic equivalence (\cref{thm:automorphic.equivalence}) as well, just by replacing the commutator with the graded commutator \eqref{eqn:graded.commutator}.

Instead of repeating the full argument, we note the following two remarks. 
Here, the trivial Hamiltonian $\sfh_{\bm{x}}= 1- \Omega_{\bm{x}} \otimes \Omega_{\bm{x}}^* $ is fixed so that its ground state is an even unit vector, i.e., $\sfP_{\bm{x}} \Omega_{\bm{x}}=\Omega_{\bm{x}}$. This is a stronger assumption than just requiring that $\sfh_{\bm{x}} \in \widehat{\cA}_{\bm{x}}^{(0)}$.
\begin{lem}\label{lem:fermion.onsite.invariant.analysis}
The following hold.
\begin{enumerate}
    \item Let $\fFDer^{\al}_{\Lambda}$ be the set of local generators of even UAL derivations on $\widehat{\cA}_{\Lambda}$, i.e., families $(\sfG_{\bm{x}})_{\bm{x} \in \Lambda}$ of even skew-adjoint operators such that $\vvert \sfG \vvert_f =\sup_{\bm{x}\in\Lambda}\|\sfG_{\bm{x}}\|_{\bm{x},f} < \infty$. 
    Then, for a smooth map $\sfG \colon \sM \to \fFDer^{\al}_{\Lambda}$ and a path $c \colon [0,1] \to \sM$, the parallel transport $\alpha_{c}(\sfG\,; t)$ preserves the $\mathbb{Z}/2$-grading, i.e., $\alpha_{c}( \sfG  \,; t) \circ \Ad(\sfP) = \Ad(\sfP) \circ \alpha_{c}(\sfG \,; t)$ holds. 
    \item Let $(\sfH , \omega_{\sfH})$ be a fermionic gapped UAL Hamiltonian. Assume that $\sfH$ is invertible, that is, there is another fermionic gapped UAL Hamiltonian $(\check{\sfH} , \omega_{\check{\sfH}})$ such that $(\sfH \boxtimes \check{\sfH} , \omega_{\sfH} \otimes \omega_{\check{\sfH}})$ is connected to the trivial Hamiltonian by a smooth path of fermionic gapped UAL Hamiltonians.
    Then $\omega_{\sfH}$ preserves the $\bZ/2$-grading, i.e., $\sfH \in \ffH_{\Lambda}^{\al}$.
\end{enumerate}
\end{lem}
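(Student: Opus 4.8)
The plan is to prove the two claims in order, with (1) feeding into (2).

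For (1), I would argue by uniqueness of solutions of the parallel-transport ODE. The key point is that an \emph{even} skew-adjoint operator $g$ satisfies $\Ad(\sfP)(g)=g$, while its graded commutator \eqref{eqn:graded.commutator} with a homogeneous element reduces to the ordinary commutator, so that $\Ad(\sfP)([g,a])=[g,\Ad(\sfP)(a)]$; summing over $\bm{x}\in\Lambda$ — the sum converging as in \eqref{eqn:derivation.welldef} — gives $\Ad(\sfP)\circ\sfad(\sfG(p,t))=\sfad(\sfG(p,t))\circ\Ad(\sfP)$ for all $(p,t)$. Hence $\alpha_c(\sfG\,;t)\circ\Ad(\sfP)$ and $\Ad(\sfP)\circ\alpha_c(\sfG\,;t)$ both solve the linear equation $(\tfrac{d}{dt}+\sfad(c^*\sfG))\beta_t=0$ with the common value $\Ad(\sfP)$ at $t=0$, and uniqueness forces them to coincide; since $\Ad(\sfP)$ is an isometry of the almost local seminorms, the identity also holds on $\widehat{\cA}_{\Lambda}^{\al}$.

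For (2), let $\overline{\sfH}$ be the given smooth path of fermionic gapped UAL Hamiltonians, together with its distinguished smooth family of ground states, joining $\sfH\boxtimes\check{\sfH}$ to the trivial Hamiltonian $\sfh$ of \eqref{eqn:trivial.Hamiltonian} (whose ground state $\omega_0$ is even since $\sfP_{\bm{x}}\Omega_{\bm{x}}=\Omega_{\bm{x}}$). I would first verify that the adiabatic connection $\sfG_{\overline{\sfH}}$ of \cref{defn:automorphic.connection} is an even $1$-form of UAL derivations: each $\overline{\sfH}_{\bm{x}}$ is even, hence so is $d\overline{\sfH}_{\bm{x}}$; the time evolution $\tau_{\overline{\sfH}(p),u}$ is the parallel transport of the even connection $\overline{\sfH}(p)\,du$, so by (1) it preserves the grading and carries $d\overline{\sfH}_{\bm{x}}(p)$ into the (closed) even subspace, which is preserved under integration against $w_v$. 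Thus $\sfG_{\overline{\sfH}}\in\Omega^1([0,1],\fFDer^{\al}_{\Lambda})$. Next, \cref{thm:automorphic.equivalence} and \eqref{eqn:parallel.transport.connection} identify the ground state $\omega_{\overline{\sfH}(t)}$ with the push-forward of $\omega_0$ by $\alpha(\sfG_{\overline{\sfH}}\,;t)$ (or its inverse, according to the parametrization), which preserves the grading by (1); since $\omega_0$ is even, so is every $\omega_{\overline{\sfH}(t)}$, in particular the ground state of $\sfH\boxtimes\check{\sfH}$. Finally, restricting that even state along the grading-equivariant unital embedding $\widehat{\cA}_{\Lambda}\hookrightarrow\widehat{\cA}_{\Lambda}\hotimes\widehat{\cA}_{\Lambda}$, $a\mapsto a\hotimes 1$ — under which $\Ad(\sfP)$ restricts to the parity of the first factor and the ground state of $\sfH\boxtimes\check{\sfH}$ restricts to $\omega_{\sfH}$ — yields $\omega_{\sfH}\circ\Ad(\sfP)=\omega_{\sfH}$, i.e.\ $\sfH\in\ffH_{\Lambda}^{\al}$ (and the same argument applies to $\check{\sfH}$).

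The logical skeleton is short; the delicate steps are (a) checking that the adiabatic connection really lands in the \emph{even} UAL derivations — this is exactly why (1) must be proved first, since one must push the grading through the time-evolution integral defining $\sfG_{\overline{\sfH}}$ — and (b) the graded-tensor-product bookkeeping at the end, namely that $a\mapsto a\hotimes 1$ is grading-equivariant and that the decoupled Hamiltonian $\sfH\boxtimes\check{\sfH}$ has ground state restricting to $\omega_{\sfH}$ on the first factor; both are routine given \cref{rmk:graded.conditional.exp}, but need attention to the signs in the graded tensor product. I expect (a) to be the only genuine obstacle and (b) to be pure bookkeeping.
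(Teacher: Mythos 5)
Your proposal is correct and follows essentially the same strategy as the paper's proof: part (1) is an ODE-uniqueness argument exploiting that $\Ad(\sfP)$ fixes the even connection, and part (2) uses the evenness of $\omega_0$ and of the adiabatic connection together with automorphic equivalence, then restricts along $a\mapsto a\hotimes 1$. The only difference is cosmetic --- you spell out the verification that $\sfG_{\overline{\sfH}}$ lands in $\fFDer^{\al}_{\Lambda}$ and the graded-tensor bookkeeping at the end, which the paper states but leaves implicit.
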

\begin{proof}
    The claim (1) is easily verified since $\Ad(\mathsf{P}) \circ \alpha_{c}(\sfG\,; t)\circ \Ad(\sfP) $ and $\alpha_{c}(\Ad(\mathsf{P})(\sfG)  \,; t) = \alpha_{c}(\sfG \,; t)$ satisfies the same ODE. 
    As for (2), assume that $\omega_{\sfH}\circ \Ad(\mathsf{P}) \neq \omega_{\sfH}$. Let $\overline{\sfH}$ be a gapped smooth homotopy of $\sfH \boxtimes \check{\sfH}$ and $\sfh$. Then, since $\Ad(\sfP)( \sfG_{\overline{\sfH}}) = \sfG_{\overline{\sfH}}$ and $\omega_0 \circ \Ad(\mathsf{P}) = \omega_0$, by \cref{thm:automorphic.equivalence} we get
    \begin{align*}
    \omega_{\sfH} ={}&{} \omega_0 \circ \alpha\big(\sfG_{\overline{\sfH}} \,; s\big)^{-1} |_{\widehat{\cA}_{\Lambda}}
    = \omega_0 \circ \Ad(\mathsf{P}) \circ \alpha\big( \sfG_{\overline{\sfH}} \,; s\big)^{-1}|_{\widehat{\cA}_{\Lambda}}\\
    = {}&{}\omega_0 \circ \alpha \big(\sfG_{\overline{\sfH}} \,; s \big)^{-1} \circ \Ad(\mathsf{P})|_{\widehat{\cA}_{\Lambda}} = \omega_{\sfH } \circ \Ad(\mathsf{P}). \qedhere
    \end{align*}
\end{proof}
Similarly, if there is an action of a compact Lie group $G$ on each matrix algebra $\cA_{\bm{x}}$ by a linear representation, then the same holds as \cref{lem:fermion.onsite.invariant.analysis}. 
That is, a $G$-invariant UAL derivation generates a $G$-invariant automorphism and, if $\sfH$ a $G$-invariant Hamiltonian that has a $G$-invariant homotopy inverse and a $G$-invariant null-homotopy, then the distinguished ground state of $\sfH$ is $G$-invariant.

Finally, we remark on a fermionic version of the split property discussed in \cref{subsection:GS}.
Note that, for a $\bZ/2$-graded C*-algebra $A$ and $\pi \colon A \to \cB(\sH)$ is a graded $\ast$-representation, the double commutant $\pi(A)''$, which coincides with the weak closure, is a $\bZ/2$-graded von Neumann algebra. 
A $\bZ/2$-graded von Neumann algebra is called a graded factor if it has no non-trivial even central projection. 
We say that a state $\omega \in \fS(\widehat{\cA}_{\Lambda})$ has the (graded) split property with respect to $\Lambda = Y \sqcup Z$ if $\pi_{\omega}(\widehat{\cA}_{Y})''$ is a graded type I factor (\cites{matsuiSplitPropertyFermionic2020,bourneClassificationSymmetryProtected2021}).
In contrast to the ungraded case, there are two graded type I factors; $\cB(\sH)$ and $\cB(\sH) \hotimes \bC \ell_1$. Here, the Clifford algebra $\bC\ell_1$ is isomorphic to $\bC \oplus \bC$ with the $\bZ/2$-grading $(a,b) \mapsto (b,a)$. Note that $\bC \ell_1 \hotimes \bC \ell_1 \cong \bC \ell_2 \cong M_2(\bC)$.

For the fermionic version of \cref{lem:split.1d}, we remark that the graded tensor product version of \cite{takesakiTheoryOperatorAlgebras2002}*{Theorem V.2.30} follows as well; if the graded tensor product $M_1 \hotimes  M_2$ of two balanced graded factors $M_1, M_2$ becomes a type I factor, then the von Neumann subalgebra $M_1^{(0)} \mathrel{\bar{\otimes}} M_2$ is of type I as well (\cite{bourneClassificationSymmetryProtected2021}*{Lemma A.1}), and hence both $M_1$ and $M_2$ are of type I. 
    Here, a graded von Neumann algebra is said to be balanced if it has an odd self-adjoint unitary. The fermionic version of \cref{prp:left.equivalent.state.unitary} also holds under a stronger assumption.

\begin{lem}\label{prp:left.equivalent.state.unitary.fermion}
    Let $\Lambda = Y \sqcup Z$ be a coarsely transverse decomposition. Let $\sfH$ be a fermionic invertible gapped UAL Hamiltonian with respect to the ground state $\omega$. Suppose that the GNS representation of $\omega$ satisfies that $\pi_{\omega}(\cA_{Y})'' \cong \cB(\sH)$. 
    Then, by taking the tensor product with the trivial Hamiltonian $\sfh$ if necessary, there exists an even unitary $u \in \widehat{\cA}_{\Lambda}^{\al}$ and even pure states $\omega_{Y}$, $\omega_{Z}$ such that $ \omega \circ \Ad(u) = \omega_{Y} \otimes \omega_{Z}$. 
\end{lem}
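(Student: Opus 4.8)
\emph{Proof sketch.} The plan is to run the proof of \cref{prp:left.equivalent.state.unitary} while tracking the $\bZ/2$-grading throughout, the one substantive new input being the hypothesis $\pi_{\omega}(\widehat{\cA}_Y)'' \cong \cB(\sH)$. First I would record that \cref{lem:quantitative.BR2.6.10}, \cref{lem:asymp.equal.state.trivial}, \cref{lem:asymp.equal.state.invertible}, and \cref{lem:split.1d} all admit fermionic analogues, obtained by the substitutions set up in \cref{subsection:GS} and the present subsection (graded commutators in place of commutators, the graded conditional expectation in place of $\Pi_Z$, even unitary groups in place of unitary groups, and the graded-tensor version of \cite{takesakiTheoryOperatorAlgebras2002}*{Theorem V.2.30}). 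In particular, in the fermionic versions of \cref{lem:asymp.equal.state.trivial,lem:asymp.equal.state.invertible}, the almost local unitary produced is \emph{even} whenever the input states are even, since it is constructed as a norm-limit $u = \lim_r u_r \cdots u_1$ of local unitaries arising from polar decompositions of conditional expectations of even operators, which are therefore themselves even.

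Second, I would use the fermionic \cref{lem:split.1d} together with the hypothesis that $\pi_{\omega}(\widehat{\cA}_Y)''$ is the graded type I factor $\cB(\sH)$ — rather than $\cB(\sH) \hotimes \bC\ell_1$ — to split the graded GNS representation $(\sH_{\omega},\pi_{\omega})$ as a graded tensor product $(\sH_Y \hotimes \sH_Z, \pi_Y \hotimes \pi_Z)$ with $\pi_Y(\widehat{\cA}_Y)'' = \cB(\sH_Y)$ and $\pi_Z(\widehat{\cA}_Z)'' = \cB(\sH_Z)$. Let $\Omega_{\omega} = \sum_i \lambda_i\, \xi_i \otimes \eta_i$ be the Schmidt decomposition with respect to the underlying Hilbert space tensor product $\sH_Y \otimes \sH_Z$. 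Since $\Omega_{\omega}$ is even and the grading is $\Gamma_Y \otimes \Gamma_Z$, uniqueness of the Schmidt decomposition lets one choose all $\xi_i, \eta_i$ homogeneous with $\xi_i$ and $\eta_i$ of the same parity; after tensoring with $\sfh$ if necessary we may take $\xi_1, \eta_1$ even. Put $\omega_Y' \coloneqq \langle \blank\, \xi_1, \xi_1 \rangle$, $\omega_Z' \coloneqq \langle \blank\, \eta_1, \eta_1 \rangle$, which are even pure states on $\widehat{\cA}_Y$, $\widehat{\cA}_Z$, and $\omega' \coloneqq \omega_Y' \otimes \omega_Z'$, which equals the even vector state of $\xi_1 \otimes \eta_1$ in the GNS representation of $\omega$.

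Third, I would prove that $\omega$ and $\omega'$ are $\cF$-asymptotically equivalent exactly as in \cref{prp:left.equivalent.state.unitary}. The rank-one projections $\xi_1\xi_1^* \in \cB(\sH_Y)$ and $\eta_1\eta_1^* \in \cB(\sH_Z)$ are even, so by Kaplansky density applied inside the even subalgebras they are $\sigma$-strong limits of nets of even norm-$\le 1$ elements of $\widehat{\cA}_Y$, resp.\ $\widehat{\cA}_Z$; combined with the $\cF$-clustering property of $\omega$ (fermionic \cref{lem:quantitative.BR2.6.10}) this gives, for $a \in \widehat{\cA}_{N_r(Z)^c}$, the bound $|\omega'(a) - \omega(a)| \le C\lambda_1^{-2} f(r)\|a\|$, and symmetrically for $a \in \widehat{\cA}_{N_r(Y)^c}$. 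Since $\Lambda = Y \sqcup Z$ is linearly coarsely transverse, any finite $X$ with $\mathop{\mathrm{dist}}(\bm{x}_0, X)$ large splits as $X = X_1 \sqcup X_2$ with $X_1 \subset (N_r(Y))^c \subset Z$, $X_2 \subset (N_r(Z))^c \subset Y$, and $\mathop{\mathrm{dist}}(X_1, X_2) > r$ for $r$ of linear size in $\mathop{\mathrm{dist}}(\bm{x}_0, X)$; feeding homogeneous simple tensors $a_1 \hotimes a_2$ into the fermionic $\cF$-clustering of $\omega$ and the two one-sided estimates yields $|\omega(a_1 \hotimes a_2) - \omega'(a_1 \hotimes a_2)| \lesssim f(r)\|a_1\|\|a_2\|$, which extends to all of $\widehat{\cA}_X$ by linearity and density. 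Finally, applying the fermionic \cref{lem:asymp.equal.state.invertible} to the invertible $\sfH$ with ground state $\omega$ and to the even pure state $\omega'$ produces an even almost local unitary $w \in \widehat{\cA}_{\Lambda}^{\al}$ with $(\omega \otimes \omega_0 \otimes \omega_0) \circ \Ad(w) = \omega' \otimes \omega_0 \otimes \omega_0$; since $\omega_0$ is a product over lattice points it splits along $Y \sqcup Z$, so, regrouping the three tensor copies along $(Y\text{-copies}) \sqcup (Z\text{-copies})$ via an even flip, the right-hand side is the product state $(\omega_Y' \otimes \omega_0|_Y \otimes \omega_0|_Y) \otimes (\omega_Z' \otimes \omega_0|_Z \otimes \omega_0|_Z) \eqqcolon \omega_Y \otimes \omega_Z$. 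Taking $u \coloneqq w$ and replacing $\sfH$ by $\sfH \boxtimes \sfh \boxtimes \sfh$ gives the claim.

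The main obstacle I expect is the $\bZ/2$-equivariant bookkeeping: ensuring that every object produced along the way — the Schmidt vectors, the approximating nets, and above all the unitary coming out of the fermionic analogue of \cref{lem:asymp.equal.state.invertible} — can be taken even, and verifying that the hypothesis $\pi_{\omega}(\widehat{\cA}_Y)'' \cong \cB(\sH)$ (as opposed to $\cB(\sH) \hotimes \bC\ell_1$) is precisely what makes the graded tensor splitting of the GNS representation, and hence the product-state form of $\omega'$, available; in the Clifford-type case no description $\omega \circ \Ad(u) = \omega_Y \otimes \omega_Z$ can exist, which is exactly why the fermionic statement requires a stronger hypothesis than the bosonic one.
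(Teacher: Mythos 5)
Your proposal is correct and follows essentially the same route as the paper's own (quite terse) proof, which simply says to re-run the bosonic \cref{prp:left.equivalent.state.unitary} while noting that the Schmidt vectors $\xi_1,\eta_1$ can be taken of pure degree and that the unitary from the fermionic analogue of \cref{lem:asymp.equal.state.invertible} can be chosen even. One small inaccuracy in your write-up: the claim that ``after tensoring with $\sfh$ if necessary we may take $\xi_1,\eta_1$ even'' is neither needed nor achievable by that operation — tensoring $\Omega_{\omega}$ with the even vector $\Omega_0$ gives Schmidt vectors $\xi_i\otimes\Omega_{Y'}$, $\eta_i\otimes\Omega_{Z'}$ of the \emph{same} parity as $\xi_i,\eta_i$; what matters, and what the paper records, is only that $\xi_1,\eta_1$ are homogeneous, so that $\xi_1\xi_1^*$, $\eta_1\eta_1^*$ are even projections and the vector states $\omega_Y'$, $\omega_Z'$ (and hence $\omega'$) are automatically even regardless of the common parity of $\xi_1,\eta_1$. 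The tensoring with $\sfh$ in the statement serves only the same role as in the bosonic proof, namely to feed into \cref{lem:asymp.equal.state.invertible}, which you correctly invoke at the end.
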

\begin{proof}
    The proof is given in the same way as \cref{prp:left.equivalent.state.unitary}. In the proof, the unit vectors $\xi_1, \eta_1$ can be taken to be of pure degree. Moreover, the unitary $u \in \widehat{\cA}_{\Lambda}^{\al}$ in \cref{lem:asymp.equal.state.invertible} can be chosen to be even. 
\end{proof}

\section{The sheaf of invertible gapped quantum spin systems}\label{section:sheaf.IP}
In this section, we give a definition of the ``space'' of invertible gapped quantum spin systems. 
As a basis of the analysis of smooth homotopies developed in \cref{section:analysis.spin}, it is relevant to formulate the space of IG UAL Hamiltonians as a \emph{smooth set}, that is, as a sheaf over the category $\Man$ of manifolds.
The topological space of our interest is obtained by taking the geometric realization of its smooth singular set. 
Many elementary homotopy theories, such as path and loop spaces, fibrations, and stable homotopy theory, can be defined in the framework of sheaves on $\Man$, which corresponds to their counterpart in topological spaces through geometric realizations.

\subsection{Sheaves on the category \texorpdfstring{$\mathsf{Man}$}{Man}}\label{subsection:sheaf}
We start with a brief review of the definition and a little elementary homotopy theory of sheaves on the category $\Man$ following the line of Madsen--Weiss \cite{madsenStableModuliSpace2007}*{Sections 2 and A}. 
For a general theory and language of sheaves on categories, we refer the readers to \cite{maclaneSheavesGeometryLogic1994}*{Chapters II and III}. 

\subsubsection{Definition}\label{subsubsection:definition.sheaf}
Let $\Man$ denote the category of smooth manifolds and smooth maps, and let $\Set_*$ denote the category of pointed sets and pointed maps. 
A $\Set _*$-valued presheaf on $\Man$, i.e., a contravariant functor $\sF \colon \Man \to \Set_*$, is said to be a \emph{sheaf} if it satisfies the gluing condition with respect to the Gr\"{o}thendieck topology on $\Man$ given by coverings via open embeddings (\cite{maclaneSheavesGeometryLogic1994}*{Sections III.1, III.4}), which is explicitly described as following; for any manifold $\sM $ and its open cover $\{ \sU_i\}_{i \in I}$ (note that each $\sU_i$ is also a manifold), any collection of sections $(\sfs_i) \in \prod_i \sF(\sU_i)$ satisfying $\sfs_i|_{\sU_{i} \cap \sU_j} = \sfs_j|_{\sU_{i}\cap \sU_{j}}$ for any $i,j \in I$ is uniquely glued to $\sfs \in \sF(\sM)$ such that $\sfs_i=\sfs|_{\sU_i}$. 
Here, for a submanifold $\sN \subset \sM$, we abbreviate pull-back by the inclusion to $\sfs|_{\sN}$.

\begin{rmk}\label{rmk:smooth.section.boundary}
For a closed subset $ \sA \subset \sM$, we write $\sF(\sA) \coloneqq \colim_{\sU}\sF(\sU)$, where $\sU$ runs over all open neighborhoods of $\sA$. For example, one can input a manifold with boundary into $\sF$ by employing such a notation.
\end{rmk}

Two sections $\sfs_0, \sfs_1 \in \sF(\sM )$ are said to be \emph{smoothly homotopic} (or \emph{concordant}) if there is $\tilde{\sfs} \in \sF(\sM \times [0,1])$ (that makes sense by \cref{rmk:smooth.section.boundary}) such that $\tilde{\sfs}|_{\sU_0} = \mathrm{pr}_\sM^*(\sfs_0)$ and $\tilde{\sfs}|_{\sU_1} = \mathrm{pr}_\sM^*(\sfs_1)$ for some open neighborhood of $\sM \times \{ 0\}$ and $\sM \times \{1\}$, where $\mathrm{pr}_\sM$ denote the projection to $\sM$. 
We write $\sF[\sM]$ for the set of smooth homotopy classes of sections on $\sM$. Similarly, for $\sfu \in \sF(\sA)$, we write $\sF(\sM,\sA  \, ; \sfu)$ for the subset of $\sF(\sM)$ consisting of $\sfs \in \sF(\sM)$ such that $\sfs|_{\sA} =\sfu$, which means that $\sfs|_{\sV} = \sfu|_{\sV}$ on some open neighborhood $\sA \subset \sV$. 
We say that $\sfs_0,\sfs_1 \in \sF(\sM,\sA \, ; \sfu)$ are smoothly homotopic relative to $\sfu$ if the smooth homotopy $\tilde{s}$ connecting them is taken from $\sF(\sM \times [0,1], \sA \times [0,1]  \, ; \pr_{\sU}^* \sfu ) $. 
Let $\sF[\sM,\sA  \, ; \sfu]$ denote the corresponding set of relative smooth homotopy classes of sections. For the special case that the reference section $\sfu$ is the base point $\mathbf{1}_\sF$ of $\sF$, we abbreviate it and write $\sF(\sM, \sA)$ and $\sF[\sM,\sA]$.

We write $\Sh (\Man)$ for the category of $\Set_*$-valued sheaves on $\Man$. 
We also write $\sSet_*$, $\kTop_*$, and $\CW_*$ for the category of pointed simplicial sets, pointed compactly generated Hausdorff spaces, and pointed $\CW$-complexes, respectively. 
Let $j \colon \Delta \to \Man$ be the covariant functor from the simplex category $\Delta$ to the category $\Man$, which sends the object $[n] = \{ 0,\cdots, n\}$ to the extended standard simplex $\Delta_e^n\coloneqq \{ (x_0,\cdots, x_n) \in \bR^{n+1} \mid \sum_i x_i=1 \}$. 
For a $\Set_*$-valued sheaf $\sF$ on $\Man$, the composition $\Sing (\sF) : = \sF \circ j$, called the \emph{smooth singular set} of $\sF$, is a pointed simplicial set. 
In combination to the geometric realization functor $| \blank | \colon \mathsf{sSet}_* \to \CW_*$, we get a functor 
\begin{align*} 
    |\Sing (\blank )| \colon \Sh(\Man) \to \sSet_* \to \CW_* \subset \kTop_*.  
\end{align*}
In short, we refer to it as the \emph{realization} of a sheaf.
The topological space obtained here remembers the smooth homotopy theory of $\sF$ in the following way.

\begin{thm}[{\cite{madsenStableModuliSpace2007}*{Proposition A.1}}]\label{thm:MW}
For any manifold $\sM \in \Man$ and a closed subset $\sA \subset \sM$, there are isomorphisms 
\begin{align*} 
    \sF[\sM] \cong {}& [\sM, |\Sing (\sF)|], \\
    \sF[\sM,\sA] \cong {}& [(\sM, \sA) , (|\Sing (\sF)| , \pt )].
\end{align*}
\end{thm}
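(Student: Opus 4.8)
The statement is \cite{madsenStableModuliSpace2007}*{Proposition A.1}, and the plan is to reproduce that argument. First I would construct the comparison map. For a manifold $\sM$, let $\Sing^{\mathrm{sm}}(\sM)$ denote the simplicial set whose set of $n$-simplices is $C^\infty(\Delta_e^n,\sM)$, with the obvious face and degeneracy operators; pulling a section $\sfs\in\sF(\sM)$ back along each smooth simplex $\sigma\colon\Delta_e^n\to\sM$ gives an element $\sigma^*\sfs\in\sF(\Delta_e^n)=\Sing(\sF)_n$, compatibly with faces and degeneracies, hence a map of simplicial sets $\Sing^{\mathrm{sm}}(\sM)\to\Sing(\sF)$ and a continuous map $|\Sing^{\mathrm{sm}}(\sM)|\to|\Sing(\sF)|$ on realizations. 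The canonical map $|\Sing^{\mathrm{sm}}(\sM)|\to\sM$ is a weak homotopy equivalence (smooth singular theory agrees with ordinary singular theory on a manifold), and since $|\Sing(\sF)|$ is a $\CW$-complex, the induced map $[\sM,|\Sing(\sF)|]\to[|\Sing^{\mathrm{sm}}(\sM)|,|\Sing(\sF)|]$ is a bijection; pulling $\sfs$ back through it yields a class $\Phi(\sfs)\in[\sM,|\Sing(\sF)|]$. Carrying out the same construction over $\sM\times[0,1]$ (using \cref{rmk:smooth.section.boundary}) shows $\Phi$ is constant on smooth homotopy classes, so it descends to $\Phi\colon\sF[\sM]\to[\sM,|\Sing(\sF)|]$; if moreover $\sfs$ restricts to $\mathbf{1}_{\sF}$ near $\sA$, the associated simplicial map carries every simplex meeting $\sA$ to the base $0$-simplex, so $\Phi$ refines to $\sF[\sM,\sA]\to[(\sM,\sA),(|\Sing(\sF)|,\pt)]$.

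The heart of the proof is that $\Phi$ is bijective, which I would establish using a smooth triangulation together with the sheaf gluing axiom. For surjectivity, given $f\colon\sM\to|\Sing(\sF)|$ I would fix a smooth (Whitehead) triangulation of $\sM$ and, after enough barycentric subdivisions, replace $f$ up to homotopy by a map recorded by a simplicial map $g\colon T\to\Sing(\sF)$ from the triangulating complex $T$; such a $g$ assigns to each $n$-simplex of $T$ a section on the corresponding extended affine simplex, compatibly with face inclusions. Thickening each closed simplex to its open star and invoking $\sF(\text{closed simplex})=\colim_{\sU}\sF(\sU)$ converts this into a family of sections on an open cover of $\sM$ agreeing on overlaps, and these glue, by the sheaf condition, to a section $\sfs\in\sF(\sM)$ with $\Phi(\sfs)=[f]$. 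For injectivity, if $\Phi(\sfs_0)=\Phi(\sfs_1)$ I would pick a homotopy $H\colon\sM\times[0,1]\to|\Sing(\sF)|$ between representatives, triangulate $\sM\times[0,1]$ so that it restricts on the two ends to triangulations realizing $\sfs_0$ and $\sfs_1$, simplicially approximate $H$ relative to the ends, and glue the resulting datum to a concordance $\tilde\sfs\in\sF(\sM\times[0,1])$ from $\sfs_0$ to $\sfs_1$. The relative statement comes out of the same steps, using triangulations for which $\sA$ is a subcomplex and demanding that every section over a simplex in $\sA$ equal $\mathbf{1}_{\sF}$.

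The step I expect to be the main obstacle is this gluing: turning the combinatorial datum of a simplicial map on a triangulation into a genuine section of $\sF$ requires promoting sections on closed simplices to sections on an open cover (the open stars of the subdivision) consistently on all overlaps so that the descent axiom applies, and for the injectivity half it must be done in a one-parameter family and relative to $\sM\times\{0,1\}$ (and to $\sA$). Making this precise needs a smooth triangulation with good covering properties and a controlled, partition-of-unity-style bookkeeping of the thickenings; granting that, the remaining verifications are formal. A more structural alternative is to observe that $\sF\mapsto\Sing(\sF)$ and the functor sending a simplicial set $K$ to the sheaf $\sM\mapsto\Hom_{\sSet}(\Sing^{\mathrm{sm}}(\sM),K)$ form a homotopical adjunction on $\Man$, reducing the claim to the two facts already used — that $|\Sing^{\mathrm{sm}}(\sM)|\to\sM$ is a weak equivalence and that $\Sing(\sF)$ computes the concordance homotopy type of $\sF$ — but the hands-on triangulation argument is the one I would actually write out.
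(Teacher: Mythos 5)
Your proposal follows essentially the same route as the cited Madsen--Weiss argument (and as the paper's own equivariant analogue in \cref{prp:Yoneda.Gsheaf}): triangulate, simplicially approximate, and convert the resulting simplicial map into a section by the sheaf gluing axiom. One caution on the step you rightly flag as the main obstacle: sections pulled back to neighborhoods of closed simplices do \emph{not} automatically agree on overlaps of open stars, and the device that makes them agree is not a partition of unity but a smooth deformation $h\colon \sM \times [0,1] \to \sM$ with $h_0=\id$, $h_t$ preserving each simplex, and $h_1(V_\sigma)$ contained in the corresponding closed simplex (so one glues the germs $h_1^* \varphi_\sigma^*(\sfs_\sigma)$ on $V_\sigma$); this is spelled out in the proof of \cref{prp:Yoneda.Gsheaf}, and the relative/injectivity cases use the same $h$ built rel $\sA$ and $\sM \times \{0,1\}$.
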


\subsubsection{Elementary homotopy theory}\label{subsubsection:homotopy.sheaf}
The category $\Sh (\Man)$ inherits some elementary homotopy theory of $\CW_*$ through \cref{thm:MW}.  
With the language of abstract homotopy theory, this is understood as the Quillen equivalence of model categories (in this direction, we refer the readers to \cites{christensenHomotopyTheoryDiffeological2014,kiharaQuillenEquivalencesModel2017,kiharaModelCategoryDiffeological2019,pavlovProjectiveModelStructures2022}) and also \cref{subsection:sheaf.model}).
Here, we quickly rephrase some elementary notions of homotopy theory in terms of sheaves on $\Man$. 

\paragraph{(1) Homotopy groups}
The homotopy group of $\sF \in \Sh(\Man)$ with the basepoint $\sfs_0 \in \sF(\pt)$ is defined by 
\begin{align*}
    \pi_n(\sF \,; \sfs_0) \coloneqq  \sF[S^n,\pt  \, ; \sfs_0] \cong \pi_n \big( |\Sing (\sF)| \,; \sfs_0 \big).
\end{align*} 
We abbreviate the basepoint $\sfs_0$ when it is chosen to be $\sfs_0 = \mathbf{1}$. 
Similarly, for a subsheaf $\sG \leq \sF$, the relative homotopy group with respect to the basepoint $\sfs_0 \in \sG(\pt)$ is defined by
\begin{align*} 
    \pi_n (\sF,\sG  \,; \sfs_0)  \coloneqq \{ \sfs \in \sF(\bD^n) \mid s|_{\sU} \in \sG(\sU) \text{ for some open neighborhood $\sU$ of $\partial \bD^n$}, \sfs(\pt )=\sfs_0  \}/\sim,  
\end{align*}
where $\pt $ is a fixed basepoint in $S^{n-1}$ and $\sim$ is the suitable smooth homotopy.  
By \cref{thm:MW}, we have 
\[
    \pi_n(\sF,\sG \,; \sfs_0) \cong \pi_n(|\Sing \sF|, |\Sing \sG| \,; \sfs_0).
\]
For pairs $(\sF_i, \sG_i)$ of sheaves on $\Man$ ($i=1,2$), a morphism $\phi \colon \sF_1 \to \sF_2$ with $\phi(\sG_1) \subset \sG_2$ is said to be a \emph{weak equivalence} if $\phi$ induces the isomorphism of smooth homotopy groups. 
By \cref{thm:MW} and the Whitehead theorem, this holds if and only if $\phi $ induces isomorphisms of smooth homotopy sets for any $\sM \in \Man$. 

\paragraph{(2) Limit and colimit}
The general notion of limit and colimit of sheaves (see e.g.~\cite{maclaneSheavesGeometryLogic1994}*{Section III.6}) covers many fundamental constructions.
What is used in this paper includes the direct product given by $(\sF \times \sG)(\sM) = \sF(\sM) \times \sG(\sM)$, the fiber product along the morphisms $\phi_i \colon \sF_i \to \sG$ is given by 
    \begin{align*} 
    (\sF_1 \times_\sG \sF_2)(\sM) = (\sF_1 \times_{(\phi_1,\phi_2)} \sF_2)(\sM)  \coloneqq \{ (\sfs_1,\sfs_2) \in \sF_1 \times \sF_2(\sM) \mid \phi_1(\sfs_1) = \phi_2(\sfs_2) \},
    \end{align*} 
and the colimit $\colim_{i \in I}\sF_i$ over a directed system $I$ given by the sheafification of the presheaf $\sM \mapsto \colim_{i \in I} \sF_i(\sM)$. 
If one has an increasing sequence $\{\sU_j\}_{j \in \bN}$ of relatively compact open submanifolds that covers $\sM$, an element of $\colim_{i \in I} \sF_i(\sM)$ is given by a family $(\sfs_{j})_{j}$ where $\sfs_j \in \sF_{i(j)}(\sU_j)$ for some $i(j) \in I$ such that $i(j) \leq i(j+1)$ and $\sfs_{j+1}|_{\sU_j} = \sfs_j$ for any $j \in \bN$.

\paragraph{(3) Mapping sheaves and lifting properties}\label{paragraph:fibration}
For $\sF \in \Sh(\Man)$ and $\sM \in \Man$, the mapping sheaf (internal Hom sheaf) is defined by $\sHom(\sM , \sF) \coloneqq \sF(\sM \times \blank)$. 
More generally, for $\sF, \sG \in \Sh(\Man)$, the set $\Hom (\sF,\sG)$ of morphisms of sheaves (i.e., natural transformations) is endowed with the sheaf structure to form the internal Hom sheaf $\sHom (\sF,\sG)$ given by
\begin{align}
    \sHom (\sF,\sG)(\sM) \coloneqq \Hom (\sF, \sHom (\sM,\sG)).\label{eqn:internal.hom}
\end{align}
Via the Yoneda embedding identifying $\sM \in \Man$ with $C^\infty(\blank, \sM) \in \Sh(\Man)$, the two definitions of $\sHom(\sM, \sF)$ above are actually identical.  

The path sheaf $\cP \sF$ is the subsheaf of $\sHom([0,1],\sF)$ defined by 
\begin{align} 
    \cP \sF (\sM) \coloneqq \bigg\{ \sfs \in \sF(\sM \times [0,1]) \mathrel{\bigg|} 
    \begin{gathered}
    \text{$\sfs|_{\sU_i} = \pr_\sM^*(\ev_i\sfs)$ for some open neighborhoods} \\ 
    \text{$\sU_i$ of $\sM \times \{i\}$ for $i=0,1$.}
    \end{gathered}
     \bigg\},
    \label{eqn:path.sheaf}
\end{align}
where $\ev_t \colon \sF(\sM \times [0,1]) \to \sF(\sM)$ denotes the restriction to $\sM \times \{ t \}$. 
The evaluations $\ev_i \colon \cP \sF \to \sF$ are morphisms of sheaves. By the gluing condition of sheaves on $\Man$, one can define the cocatenation paths $\sfs_1 \circ \sfs_2$ if $\ev_1 \sfs_2 = \ev_0 \sfs_1$. 
We also define the subsheaves $\cP_{\sfs_0} \sF \coloneqq  (\ev_0)^{-1}(\sfs_0)$, $\cP_{\sfs_0,\sfs_1}\sF \coloneqq (\ev_0,\ev_1)^* (\sfs_0,\sfs_1)$ for $\sfs_0,\sfs_1 \in \sF(\pt)$. In particular, the based loop sheaf $\Omega \sF$ is defined by $\Omega \sF \coloneqq \cP_{\mathbf{1}, \mathbf{1}} \sF$.
We say that two morphisms $\phi_0 , \phi_1 \colon \sF \to \sG$ are homotopic if there exists $\tilde{\phi} \colon \sF \to \cP \sG $ such that $\phi_i = \ev_i \circ \tilde{\phi}$ for $i=0,1$. 

A morphism $\phi \colon \sF \to \sG$ is a (Serre-) \emph{fibration} if it has the smooth homotopy lifting property for sections on manifolds, i.e., the map 
\begin{align*} 
    \cP\sF(\sM) \to \mathrm{Cocyl}(\phi)(\sM) \coloneqq \cP\sG(\sM) \times _{(\ev_0 ,\phi)} \sF(\sM)
\end{align*}
is surjective for any $\sM$. 
For such $\phi$, the canonical inclusion
\[
\phi^{-1}(\sfu) \to \operatorname{hofib}_{\sfu}(\phi)\coloneqq \big(\ev_1 \colon  \mathop{\mathrm{Cocyl}}(\phi) \to \sG\big)^{-1}(\sfu),
\]
given by $\sfs \mapsto (\sfu,\sfs)$, is a weak equivalence. Note that the sheaves $\operatorname{hofib}_{\sfu}(\phi)$ are weakly equivalent for all choices of $\sfu$, as long as it is contained in the same connected component in $\sG$.

It is straightforward to check that the above definitions are consistent with those of $|\Sing (\sF)|$ in the following way. 
\begin{lem}\label{lem:survey.homotopy}
The following hold. 
\begin{enumerate}
    \item There are homotopy equivalences $|\Sing (\cP \sF)| \simeq\cP |\Sing(\sF)|$, $|\Sing(\cP_{\sfs} \sF)| \simeq\cP_{\sfs}|\Sing(\sF)|$, and $|\Sing (\Omega \sF)| \simeq\Omega |\Sing (\sF)|$. 
    \item If $\phi \colon \sF \to \sG$ is a fibration, then the fiber sheaves $\sE_{\sfu} \coloneqq \phi^{-1}(\sfu)$ are all weakly equivalent independent of the choice of $\sfu \in \sG(\pt)$. Moreover, the morphisms $\Omega \sG \to \operatorname{hofib}_{\sfu}(\phi) \leftarrow \sE_{\sfu}$ 
    induce the connecting map of the homotopy exact sequence
    \begin{align*} 
    \cdots \to \pi_n(\sE_{\sfu}) \to \pi_n(\sF)  \to \pi_n(\sG)  \xrightarrow{\delta} \pi_{n-1}(\sE_{\sfu}) \to \pi_{n-1}(\sF) \to  \cdots .
    \end{align*}
\end{enumerate}
\end{lem}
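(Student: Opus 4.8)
The plan is to establish part~(2) first and then obtain part~(1) as a consequence, since $\cP\sF$, $\cP_\sfs\sF$ and $\Omega\sF$ all arise as fibers of explicit fibrations built out of $\cP\sF$. I would begin with two hands-on facts about the path sheaf, both proved directly in $\Sh(\Man)$. First, $\ev_0\colon\cP\sF\to\sF$ is a fibration: given a path $g\in\cP\sF(\sM)$ and $f\in\cP\sF(\sM)$ with $\ev_0(g)=\ev_0(f)$, the required lift in $\cP(\cP\sF)(\sM)=\cP\sF(\sM\times[0,1])$ is obtained by concatenating the reverse of $g|_{[0,t]}$ with $f$ and reparametrising smoothly in $t$; the sitting-instant conditions in \eqref{eqn:path.sheaf} survive concatenation and reparametrisation. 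Second, $\ev_0$ is a weak equivalence: the constant-path morphism $\mathbf{c}\colon\sF\to\cP\sF$ is a section of it, and $\mathbf{c}\circ\ev_0\simeq\id_{\cP\sF}$ is realised by $\sfs\mapsto\sfs\circ\lambda_t$ for a smooth family $\lambda_t$ of weakly monotone self-maps of $[0,1]$ interpolating the constant map $0$ and the identity, chosen to respect sitting instants. The same arguments show $(\ev_0,\ev_1)\colon\cP\sF\to\sF\times\sF$ is a fibration whose fiber over $(\sfs_0,\sfs_1)$ is $\cP_{\sfs_0,\sfs_1}\sF$; in particular $\Omega\sF$ is its fiber over $(\mathbf{1},\mathbf{1})$.

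For part~(2), let $\phi\colon\sF\to\sG$ be a fibration. The equivalence $\sE_{\sfu_0}\simeq\sE_{\sfu_1}$, for $\sfu_0,\sfu_1$ in a single path component of $\sG$, is the usual ``transport along a path'' argument carried out internally: pick $\gamma\in\cP\sG(\pt)$ from $\sfu_0$ to $\sfu_1$; for $\sfs\in\sE_{\sfu_0}(\sM)$, lift $\pr_\sM^*\gamma$ through $\phi$ starting at $\sfs$ by the HLP and send $\sfs$ to the endpoint of the lift, which lies in $\sE_{\sfu_1}(\sM)$; independence up to smooth homotopy and invertibility via $\bar\gamma$ follow from the HLP over $\sM\times[0,1]$ and $\sM\times[0,1]\times[0,1]$. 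For the exact sequence (taking $\sfu$ to be the basepoint $\mathbf{1}$ of $\sG$, as we may, so that $\mathbf{1}_\sF\in\sE_{\mathbf 1}(\pt)$ is a basepoint), form the homotopy fiber $\sF\times_{\sG,\ev_1}\cP_{\mathbf 1}\sG$; the inclusion $\sE_{\mathbf 1}\to\sF\times_{\sG,\ev_1}\cP_{\mathbf 1}\sG$, $\sfs\mapsto(\sfs,\text{const}_{\mathbf 1})$, is a weak equivalence because $\phi$ is a fibration, $\cP_{\mathbf 1}\sG$ is contractible by the sliding homotopy above, and the composite $\Omega\sG\hookrightarrow\sF\times_{\sG,\ev_1}\cP_{\mathbf 1}\sG\xleftarrow{\ \sim\ }\sE_{\mathbf 1}$, $\ell\mapsto(\mathbf{1}_\sF,\ell)$, is the asserted morphism (well-defined in the homotopy category). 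Applying $|\Sing(-)|$ converts $\phi$ into a Serre fibration of CW complexes with fiber $|\Sing\sE_{\mathbf 1}|$, and its homotopy long exact sequence, rewritten through \cref{thm:MW} and $\pi_n(\sF)\cong\pi_n|\Sing\sF|$, is the one claimed.

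Part~(1) then follows by a comparison of fibration sequences. Prism subdivisions of $\Delta^n\times\Delta^1$ give a natural map $\Sing(\cP\sF)\to\Sing(\sF)^{\Delta^1}$, and composing with the adjoint of the evaluation $\Sing(\sF)^{\Delta^1}\times\Delta^1\to\Sing\sF$ yields a natural map $|\Sing(\cP\sF)|\to\cP|\Sing\sF|$ commuting, via $(\ev_0,\ev_1)$, with the projections to $|\Sing\sF|^2\cong|\Sing(\sF\times\sF)|$. On total spaces this map is a weak equivalence, because both sides are carried to $|\Sing\sF|$ by the (equivalence) $\ev_0$ in a compatible way; hence, by the five lemma applied to the homotopy long exact sequences of the two fibrations over $|\Sing\sF|^2$, the induced map on fibers $|\Sing(\Omega\sF)|\to\Omega|\Sing\sF|$ is a weak equivalence, and the analogous comparison over $|\Sing\sF|$ using $\ev_0$ alone gives $|\Sing(\cP_\sfs\sF)|\simeq\cP_\sfs|\Sing\sF|$.

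The step I expect to be the real obstacle is the one invoked in passing above: that $|\Sing(-)|$ sends a sheaf-fibration $\phi$ to a Serre fibration $|\Sing\phi|$ whose fiber is, up to homotopy, the realization of the sheaf-theoretic fiber $\phi^{-1}(\sfu)$ (and, relatedly, that it takes the indicated finite homotopy pullbacks to homotopy pullbacks). This is not formal, since $\Sing\sF$ need not be a Kan complex; it is precisely the input furnished by Madsen--Weiss \cite{madsenStableModuliSpace2007}*{Appendix A} and by the model structures on $\Sh(\Man)$ recalled in \cref{subsection:sheaf.model}, on which I would rely.
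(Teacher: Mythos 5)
The paper leaves this lemma unproved --- the surrounding text in \cref{subsubsection:homotopy.sheaf} merely asserts that the verification is straightforward --- so there is no written proof to compare against. Your argument is a valid filling-in. You correctly isolate the one non-formal input, namely that $|\Sing(\blank)|$ carries a sheaf fibration to a Serre fibration with fibre $|\Sing\sE_\sfu|$; that is precisely what \cref{subsection:sheaf.model} supplies, since by \cref{lem:Sh.model} and \cref{rmk:Sh.proper} the model structure on $\Sh(\Man)$ is right-transferred along $\Sing$ (so $\phi$ is a fibration iff $\Sing\phi$ is a Kan fibration), $\Sing$ preserves strict fibres by inspection, and geometric realization takes Kan fibrations to Serre fibrations and commutes with finite limits. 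Two points are worth flagging, neither of which affects soundness. First, your transport-along-a-path assignment $\sE_{\sfu_0}(\sM)\to\sE_{\sfu_1}(\sM)$ depends on a chosen lift and so is not a morphism of sheaves; what one actually obtains is a zigzag of weak equivalences through the homotopy pullback $\sF\times_{\sG}\cP_{\sfu_0,\sfu_1}\sG$, or equivalently a weak equivalence of the realizations, which is the appropriate reading of ``weakly equivalent'' in the statement. Second, a single smooth family $\lambda_t$ interpolating the constant map at $0$ and $\id_{[0,1]}$ while preserving sitting instants at $u=1$ requires a somewhat careful choice (e.g.\ $\lambda_t(u)=\alpha(t)u+\beta(t)\rho(u)$ with $\rho$ a surjective reparametrisation constant near the endpoints and $(\alpha,\beta)$ a smooth path from $(0,0)$ through $(0,1)$ to $(1,0)$); it is easier, and equivalent, to split the sliding homotopy into two steps through a fixed sitting-instant reparametrisation $\sfs\mapsto\sfs\circ\rho$.
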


\begin{defn}
An $\Omega$-spectrum object in $\Sh(\Man)$ is a family $\{ \sF_n\}_{n \in \bZ_{\geq 0 }}$ of $\Set$-valued sheaves on $\Man$ with weak equivalences $\epsilon_n \colon \sF_n \to \Omega \sF_{n+1}$. 
\end{defn}

\begin{prp}
Let $\{ \sF_n\}_{n \in \bN}$ be an $\Omega$-spectrum object in $\Sh(\Man)$. Then the associated family of realizations $\mathit{F}_n \coloneqq |\Sing \sF_n|$ form a CW-$\Omega$-spectrum. The associated cohomology theory $\mathrm{F}^n(\sM) \coloneqq  [\sM, \mathit{F}_n]$ is isomorphic to the smooth homotopy set $\sF_n[\sM]$ for any $\sM \in \Man$.
\end{prp}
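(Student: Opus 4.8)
The statement has two parts: that the family $\{\mathrm{F}_n\}$ forms a CW-$\Omega$-spectrum, and that the associated cohomology theory agrees with the smooth homotopy functor $\sF_n[\blank]$. The first part is essentially formal given the earlier machinery. The plan is to apply the realization functor $|\Sing(\blank)|$ to each structure map $\epsilon_n \colon \sF_n \to \Omega \sF_{n+1}$. By \cref{lem:survey.homotopy} (1) there is a homotopy equivalence $|\Sing(\Omega \sF_{n+1})| \simeq \Omega |\Sing \sF_{n+1}| = \Omega \mathrm{F}_{n+1}$, so composing gives a map $\mathrm{F}_n = |\Sing \sF_n| \to |\Sing(\Omega \sF_{n+1})| \simeq \Omega \mathrm{F}_{n+1}$. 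Since $\epsilon_n$ is a weak equivalence of sheaves, it induces isomorphisms on all smooth homotopy groups, and by definition $\pi_*(\mathrm{F}_n) \cong \pi_*(\sF_n)$ via $|\Sing|$; hence the composite $\mathrm{F}_n \to \Omega \mathrm{F}_{n+1}$ is a weak homotopy equivalence of CW-complexes, i.e.\ (by Whitehead) a homotopy equivalence. The (infinite-loop) adjoint structure maps $\Sigma \mathrm{F}_n \to \mathrm{F}_{n+1}$ are obtained by adjunction, and one checks that the spaces $\mathrm{F}_n = |\Sing \sF_n|$ are CW-complexes by construction of the geometric realization functor. This gives a CW-$\Omega$-spectrum $\mathrm{F} = \{\mathrm{F}_n\}$.

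For the second part, one invokes \cref{thm:MW}: for any manifold $\sM$ there is a natural isomorphism $\sF_n[\sM] \cong [\sM, |\Sing \sF_n|] = [\sM, \mathrm{F}_n] = \mathrm{F}^n(\sM)$, which is exactly the claimed identification of the cohomology functor with the smooth homotopy-class functor. One should remark that naturality of the isomorphism in $\sM$ follows from the naturality statement in \cref{thm:MW} (the isomorphism is induced by the unit $\sM \to |\Sing C^\infty(\blank,\sM)|$ and the evaluation), so that $\sF_n[\blank]$ indeed inherits the structure of a cohomology theory — the suspension isomorphism and exactness being transported from the spectrum $\mathrm{F}$ via this natural isomorphism, using that $\sF_n[\blank]$ is represented.

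**Main obstacle.** The genuinely delicate point is \emph{compatibility of the structure maps}: one must verify that the homotopy equivalence $|\Sing(\Omega \sF_{n+1})| \simeq \Omega \mathrm{F}_{n+1}$ furnished by \cref{lem:survey.homotopy} is natural enough that the diagram relating the realized structure maps to the loop-space structure maps commutes up to coherent homotopy — i.e.\ that $|\Sing|$ really does send an $\Omega$-spectrum object in $\Sh(\Man)$ to a genuine $\Omega$-spectrum, with the bonding maps and their adjoints matching up. This is where one would want to lean on the model-categorical picture alluded to in \cref{subsubsection:homotopy.sheaf} (the Quillen equivalence between $\Sh(\Man)$ and spaces), under which $|\Sing|$ is the left derived functor of a Quillen equivalence and hence preserves all the relevant homotopy-theoretic structure, including loop and suspension functors and spectrum objects. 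With that in hand the compatibility is automatic; proving it by hand would require carefully tracking the adjunction counit $|\Sing(\sM \times \blank)|$-versus-$\sM \times |\Sing(\blank)|$ comparison, which is routine but tedious. Everything else — the weak-equivalence-implies-homotopy-equivalence step, CW-ness of realizations, and the cohomology identification — is a direct appeal to \cref{thm:MW}, \cref{lem:survey.homotopy}, and Whitehead's theorem.
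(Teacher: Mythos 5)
The paper gives no explicit proof of this proposition; it is presented as an immediate consequence of \cref{thm:MW} and \cref{lem:survey.homotopy}, which is exactly what you assemble. Your proposal matches the paper's intended argument, and your flagging of the coherence of the loop-space comparison as the one non-formal point (handled via the Quillen equivalence alluded to in \cref{subsection:sheaf.model}) is a fair assessment of what the paper leaves implicit.
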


\subsubsection{Equivariant sheaves}
We are also interested in the equivariant homotopy theory of sheaves on $\Man$ for a compact Lie group $G$.

\begin{defn}\label{defn:Gsheaf}
A \emph{$G$-sheaf} on $\Man$ is a pair $(\sF,\nu)$, where $\sF$ is a sheaf on the category $\Man$ and $\nu \colon \sF \to \sHom (G,\sF)$ is a morphism of sheaves, such that $\sHom (\id_G , \nu) \circ \nu = \sHom(m^* ,\id_{\sF} ) \circ \nu$ and $\sHom (\iota^*,\id_{\sF}) \circ \nu =\id_{\sF}$ holds, where $m \colon G \times G \to G$ denotes the multiplication and $\iota \colon \pt \to G$ denotes the embedding to the origin. 
\end{defn}
Pointwisely on $G$, by denoting $\nu_g \coloneqq \iota_g \circ \nu \colon \sF \to \sF$, the above relations are rephrased to $\nu_g \circ \nu_h = \nu_{gh}$ and $\nu_e=\id$, i.e., $g \mapsto \nu_g$ gives a left action of $G$ on $\sF$. 
The assumption that $\nu$ is realized as a morphism of sheaves corresponds to the smoothness of this action. In particular, a $G$-action on the sheaf $C^\infty(\blank,\sM)$ is the same thing as a smooth $G$-action on $\sM$.

Let $(\sM, \gamma)$ be a smooth $G$-manifold, i.e., $\gamma \colon G \times \sM  \to \sM$ be a smooth map satisfying the axiom of group actions from the left. 
The $G$-action onto $\sF(\sM)$ is defined by 
\begin{align*} 
    \tilde{\nu}\colon 
    \sF(\sM) \xrightarrow{\gamma^*} 
    \sF(G \times \sM) \xrightarrow{\nu} 
    \sF(G \times (G \times \sM)) \xrightarrow{(\check{\Delta}_G \times \id_{\sM})^* } 
    \sF(G \times \sM ),
\end{align*} 
where $\check{\Delta}_G \colon G \to G \times G$ is given by $\check{\Delta}_G(g)=(g,g^{-1})$. 
Note that $\nu$ and $\gamma^*$ commutes, i.e., $\tilde{\nu} = (\check{\Delta}_G \times \id_{\sM})^* \circ (\id_G \times \gamma)^* \circ \nu$ also holds. 
We write $\sF(\sM)^G$ for the $G$-invariant subset of $\sF(\sM)$ with respect to this action, i.e., 
\begin{align*}
    \sF^G(\sM) \coloneqq \{ \sfs \in \sF(\sM) \mid \tilde{\nu}(\sfs) = \pr_{\sM}^* (\sfs) \}.
\end{align*}
By restricting it to manifolds with trivial $G$-actions, the assignment $\sM \to \sF^G(\sM)$ gives a sheaf on $\Man$, which we write $\sF $ and call the $G$-fixed point sheaf. 

Two sections $\sfs_1, \sfs_2 \in \sF^G(\sM)$ are \emph{$G$-equivariantly smoothly homotopic} if there is a $G$-invariant smooth homotopy $\tilde{\sfs} \in \sF^G(\sM \times \bR)$ connecting $\sfs_1$ and $\sfs_2$. 
Let $\sF^G[\sM]$ denote the set of smooth $G$-homotopy classes. 
The relative version $\sF^G(\sM,\sA)$ and $\sF^G[\sM,\sA]$ are also defined similarly.  

In the same way as \cref{thm:MW}, there is a $G$-space whose $G$-homotopy type represents the functor $\sM \mapsto \sF^G[\sM]$ in the following way.  We write $\Sh_G(\Man)$ and $\kTop_G$ for the categories of $G$-sheaves and $G$-spaces respectively.
\begin{thm}\label{thm:equivariant.MW}
There is a functor $|\Sing_G (\blank)|_G \colon \Sh_G(\Man) \to \kTop_G$, which we call the \emph{$G$-realization}, that satisfies the following property. For any $G$-sheaf $\sF$ on $\Man$, any $G$-manifold $\sM$ and any closed subset $\sA \subset \sM$, there are isomorphisms
\begin{align*} 
    \sF^G [\sM] \cong{}& \big[ \sM, |\Sing_G \sF|_G\big]^G,\\
    \sF ^G[\sM,\sA] \cong{}& \big[ (\sM,\sA) , (|\Sing_G\sF|_G , \mathbf{1}) \big]^G.
\end{align*}
\end{thm}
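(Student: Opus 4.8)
The plan is to reduce \cref{thm:equivariant.MW} to the non-equivariant case \cref{thm:MW} by passing to the system of fixed-point sheaves over the orbit category $\mathcal{O}_G$, in the manner of Elmendorf's theorem. For each closed subgroup $H \leq G$, restricting the smooth action $\nu$ along $H \hookrightarrow G$ makes $\sF$ an $H$-sheaf, and one forms its fixed-point sheaf $\sF^H \in \Sh(\Man)$ exactly as $\sF^G$ was built above (sections over manifolds with trivial action that are invariant under $\tilde\nu|_H$). A $G$-map $G/H \to G/K$ carrying $eH$ to $gK$ — which forces $g^{-1}Hg \leq K$ — induces $\nu_g \colon \sF^K \to \sF^H$, and these are compatible with composition, so $H \mapsto \sF^H$ refines to a contravariant functor $\underline{\sF} \colon \mathcal{O}_G \to \Sh(\Man)$. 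Equivalently, $(G/H,[n]) \mapsto \sF^G(G/H \times \Delta_e^n) = \sF^H(\Delta_e^n)$ gives an $\mathcal{O}_G$-diagram of simplicial sets $G/H \mapsto \Sing \sF^H$; I define $|\Sing_G \sF|_G$ to be the $G$-CW complex that the bar construction underlying Elmendorf's theorem produces from it. This is functorial in $\sF$, and its $H$-fixed-point space is naturally weakly equivalent to $|\Sing \sF^H|$, which yields the functor $\Sh_G(\Man) \to \kTop_G$.

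For the representability isomorphisms I would argue by induction over an equivariant cell structure. First one records the sheaf-side analogue of orbit reduction: $\sF^G(G/H \times \sN) \cong \sF^H(\sN)$ for any manifold $\sN$ with trivial $G$-action (and likewise for pairs), immediate from the definition of $\tilde\nu$ together with the smoothness of the action — which is precisely what lets one glue the local sections $\nu_g(\sfs_0)$ over $G/H$. Since $G$ is a compact Lie group, every smooth $G$-manifold, and every pair $(\sM,\sA)$ with $\sA$ a closed $G$-submanifold, admits a $G$-CW structure (Illman), with $G$-invariant collars obtained by averaging ordinary ones over $G$. On a single equivariant cell $G/H \times \bD^n$ with boundary $G/H \times S^{n-1}$, the adjunction $[G/H \times Z, Y]^G \cong [Z, Y^H]$ and the above sheaf reduction identify both $\sF^G[G/H \times \bD^n, G/H \times S^{n-1}]$ and $[(G/H \times \bD^n, G/H \times S^{n-1}), (|\Sing_G\sF|_G, \mathbf{1})]^G$ with the non-equivariant pair invariants $\sF^H[\bD^n, S^{n-1}]$ and $[(\bD^n, S^{n-1}),(|\Sing \sF^H|,\pt)]$, which coincide by \cref{thm:MW}. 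A skeletal induction — gluing cells via the cofiber sequences of the equivariant filtration and a five-lemma / Mayer--Vietoris argument, run throughout in the relative form so that the absolute statement $\sF^G[\sM] \cong [\sM,|\Sing_G\sF|_G]^G$ and the pair statement are handled together, and using that a smooth $G$-homotopy of sections is a section over $\sM \times [0,1]$ for the product action — then upgrades the cellwise bijections to the claimed isomorphisms.

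The main obstacle is making the fixed-point data genuinely coherent and compatible with realization: one must check that $\{\sF^H\}_H$ really is an $\mathcal{O}_G$-diagram, that $|\Sing(\blank)|$ transports it to an $\mathcal{O}_G$-diagram of spaces, and that Elmendorf's construction then yields a $G$-space whose $H$-fixed points recover $|\Sing\sF^H|$ naturally enough in $\sF$ for the induction to close, rather than merely up to an uncontrolled equivalence. A secondary but unavoidable technical burden is assembling a clean relative ($G$-pair) version of \cref{thm:MW} together with equivariant collar and homotopy-extension lemmas; these follow the non-equivariant arguments verbatim once one averages the auxiliary choices (partitions of unity, collars, Riemannian metrics) over the compact group $G$.
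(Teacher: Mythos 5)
Your proposal takes a different route from the paper's main proof and has a genuine gap in the inductive step.

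The paper's proof (\cref{subsection:G.realization}) is the direct equivariant analogue of Madsen--Weiss's construction: it defines a category $\Delta_G$ of Illman equivariant simplices $\Delta_n(G,\bm H)$, builds $|\Sing_G\sF|_G$ as a coend over $\Delta_G$, and then explicitly constructs an inverse to the comparison map $\xi$ in \cref{prp:Yoneda.Gsheaf} by (i) taking a smooth $G$-triangulation of $\sM$ via Illman's theorem, (ii) a ``covering simplicial approximation'' lemma (\cref{lem:covering.simplicial.approximation}) to make a continuous $G$-map $G$-simplicial, and (iii) a carefully chosen smooth $G$-map $h\colon\sM\times[0,1]\to\sM$ contracting neighborhoods of each cell, which allows the simplexwise sections to glue to a global section of $\sF$. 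Your proposal instead defines $|\Sing_G\sF|_G$ by feeding the $\mathcal{O}_G$-diagram $H\mapsto|\Sing\sF^H|$ into Elmendorf's bar construction, which is the viewpoint the paper presents as supplementary background in \cref{subsection:sheaf.model,subsection:Gsheaf.category}.

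The gap is in the skeletal induction. Once you leave a single equivariant cell $G/H\times\bD^n$, the intermediate stages $\sM_{n-1}$ of a $G$-CW filtration are not manifolds, so $\sF^G[\sM_{n-1}]$ is not defined; the sheaf $\sF$ lives on $\Man$, not on $\kTop$. There is therefore no long exact/cofiber sequence on the sheaf side for the five-lemma to act on, and even if there were, $\sF^G[\blank]$ is only a pointed set — the five lemma and Mayer--Vietoris arguments apply to exact sequences of groups, not to pointed sets. Madsen--Weiss's original Proposition A.1 (and the paper's equivariant \cref{prp:Yoneda.Gsheaf}) deliberately sidesteps skeletal induction for exactly this reason and instead builds the inverse map globally out of the smooth triangulation.

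There is a second, subtler issue even if you route everything through the model-categorical formalism of \cref{prp:Elmendorf,prp:base.change}: to deduce $\sF^G[\sM]\cong[\sM,|\Sing_G\sF|_G]^G$ from a Quillen equivalence, you need the representable $G$-sheaf $Y(\sM)$ to be (or be replaceable by) a cofibrant object. As the paper notes in \cref{rmk:fibrant.cofibrant}(3), it is \emph{not} known that $Y(\sM)$ is cofibrant; one instead proves that the cofibrant model $|\Sing_G\sM|_{G,\mathrm{sm}}$ is homotopy equivalent to $Y(\sM)$ — and that homotopy equivalence is precisely the content of \cref{prp:Yoneda.Gsheaf}, which is the constructive argument via Illman triangulation. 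So the abstract Elmendorf framework does not let you avoid that construction; it needs it as input. Your plan would have to supply an argument in its place, and the five-lemma-over-skeleta step cannot provide one.
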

\cref{section:Gsheaf} provides a more comprehensive treatment of these objects from the viewpoint of equivariant algebraic topology, including the proof of \cref{thm:equivariant.MW}.

\begin{rmk}
    The $G$-space $|\Sing _G \sF|_G$ is a $G$-CW-complex. If a $G$-equivariant morphism $\phi \colon \sF \to \sG$ of $G$-sheaves induces weak equivalences of fixed point subsheaves $\phi \colon \sF^H \to \sG^H$ for any closed subgroup $H \leq G$, then, by the Elmendorf theorem \cite{elmendorfSystemsFixedPoint1983}, the induced map $\phi \colon |\Sing _G \sF|_G \to |\Sing _G \sG|_G$ is indeed a $G$-homotopy equivalence, i.e., has a $G$-homotopy inverse $\phi^{-1} \colon |\Sing _G \sG|_G \to |\Sing _G \sF|_G$. 
\end{rmk}

A concrete sheaf, or a diffeological space, is a sheaf $\sF$ such that $\sF(\sM) \to \prod_{p \in \sM} \sF(p)$ is injective (\cite{baezConvenientCategoriesSmooth2011}). 
\begin{lem}\label{lem:inducction.sheaves}
    Let $\sF$ be a $G$-sheaf that is also a diffeological space. Then, for a closed subgroup $K$ of $G$ and a $K$-manifold $\sM$, there is a bijection $\mathrm{ind}_K^G \colon \sF^K(\sM) \to \sF^G(G \times_K \sM)$, which we call the induction. 
\end{lem}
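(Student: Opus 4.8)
The plan is to construct the induction bijection $\mathrm{ind}_K^G$ directly on the level of sections and then verify it is a bijection by exhibiting an explicit inverse. First I would recall that for a $K$-manifold $\sM$, the induced space $G \times_K \sM = (G \times \sM)/K$ carries a canonical smooth $G$-action by left multiplication on the $G$-factor, and that the quotient map $q \colon G \times \sM \to G \times_K \sM$ together with the inclusion $\iota_{\sM} \colon \sM \hookrightarrow G \times \sM$, $\bm{x} \mapsto [e,\bm{x}]$, are the structure maps to keep track of. Since $\sF$ is assumed to be a diffeological space, I can argue pointwise on the underlying sets: a section in $\sF^G(G \times_K \sM)$ is the same as a $G$-equivariant smooth map $G \times_K \sM \to \sF$, and a section in $\sF^K(\sM)$ is a $K$-equivariant smooth map $\sM \to \sF$. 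The bijection is then the standard adjunction between restriction and induction for $G$-spaces: given $\sfs \in \sF^K(\sM)$, define $\mathrm{ind}_K^G(\sfs)([g,\bm{x}]) \coloneqq \nu_g(\sfs(\bm{x}))$, which is well-defined because $\sfs$ is $K$-equivariant and $\nu$ is an action, and is $G$-equivariant by construction.

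The key steps, in order, would be: (1) check that the formula $[g,\bm{x}] \mapsto \nu_g(\sfs(\bm{x}))$ descends to a well-defined map on $G \times_K \sM$, using $K$-equivariance of $\sfs$ and the cocycle relation $\nu_g \circ \nu_h = \nu_{gh}$ from \cref{defn:Gsheaf}; (2) check smoothness of the resulting map as a section of $\sF$ — this is where the sheaf/diffeology structure is used: one pulls back along $q$ to $G \times \sM$ where the map is the composite $G \times \sM \xrightarrow{\id_G \times \sfs} G \times \sF \xrightarrow{\nu} \sF$, visibly a section since $\nu$ is a morphism of sheaves, and smoothness descends through the submersion $q$; (3) check $G$-equivariance, immediate from $\nu_{g'} \nu_g = \nu_{g'g}$; (4) construct the inverse: given $\sft \in \sF^G(G \times_K \sM)$, set $\mathrm{res}(\sft) \coloneqq \iota_{\sM}^* q^* \sft = (\sft \circ q \circ \iota_{\sM}) \in \sF(\sM)$, i.e. $\bm{x} \mapsto \sft([e,\bm{x}])$, and verify it is $K$-invariant using that $[k,\bm{x}] = [e, k\bm{x}]$ in $G \times_K \sM$ combined with $G$-invariance of $\sft$; (5) verify the two composites are identities, which is a direct unwinding: $\mathrm{res}(\mathrm{ind}_K^G(\sfs))(\bm{x}) = \nu_e(\sfs(\bm{x})) = \sfs(\bm{x})$, and $\mathrm{ind}_K^G(\mathrm{res}(\sft))([g,\bm{x}]) = \nu_g(\sft([e,\bm{x}])) = \sft([g,\bm{x}])$ by $G$-equivariance of $\sft$.

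I expect the main obstacle to be the smoothness verification in step (2), specifically handling it correctly within the sheaf-theoretic (rather than naive set-theoretic) framework. The subtlety is that $\sF^G(\sM)$ is defined via the action $\tilde\nu$ built from $\gamma^*$, $\nu$, and a diagonal pullback, so one must check that the ad hoc pointwise formula genuinely lands in $\sF(G\times_K\sM)$ as a section and that the $G$-invariance condition $\tilde\nu(\mathrm{ind}_K^G \sfs) = \pr^*(\mathrm{ind}_K^G\sfs)$ holds in the precise sense of \cref{defn:Gsheaf} — not merely pointwise on the underlying diffeological space. The cleanest route is to observe that $q \colon G \times \sM \to G \times_K \sM$ is a $G$-equivariant surjective submersion, hence $q^*$ identifies $\sF(G\times_K\sM)$ with the equalizer of the two pullbacks along $G \times (G \times \sM) \rightrightarrows G \times \sM$ (descent along the submersion, which holds because $\sF$ is a sheaf for the open-cover topology and submersions admit local sections), and then the map $\nu \circ (\id_G \times \sfs)$ on $G \times \sM$ manifestly satisfies the descent condition by $K$-equivariance of $\sfs$. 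The remaining checks are routine diagram chases using only the axioms of \cref{defn:Gsheaf}.
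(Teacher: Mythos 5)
Your argument is correct and matches the paper's in substance. The paper's proof proceeds by first using associativity of $\nu$ and the concreteness (diffeology) of $\sF$ to show that $\nu\sfs \in \sF(G \times \sM)$ is invariant under the right $K$-action, then explicitly pulls back $\nu\sfs$ along local sections of the principal $K$-bundle $q \colon G\times\sM \to G\times_K\sM$ over a trivializing open cover, and glues using the sheaf axiom; bijectivity is by pointwise comparison. You package the same content as ``descent along the surjective submersion $q$'': the section $\nu \circ (\id_G \times \sfs)$ satisfies the equalizer condition by $K$-equivariance of $\sfs$, and the gluing argument with local sections of $q$ is precisely what establishes that descent. You are also right that the diffeology assumption is what makes the pointwise formula honest (both for identifying the descended section with $[g,\bm{x}] \mapsto \nu_g(\sfs(\bm{x}))$, and for translating the abstract invariance condition $\tilde\nu(\sft) = \pr^*(\sft)$ into pointwise $G$-equivariance). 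Your explicit construction of the inverse $\mathrm{res}(\sft) = \iota_{\sM}^* q^* \sft$ and verification of both composites is a bit more spelled out than the paper's terse ``pointwise comparison shows bijectivity,'' but is the same argument. One minor slip: the correct descent object is the fiber product $(G\times\sM)\times_{G\times_K\sM}(G\times\sM) \cong K \times G \times \sM$, not $G\times(G\times\sM)$ as you wrote; this does not affect the argument.
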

\begin{proof}
    By the associativity of $\nu$, the composition 
    \[ 
     \iota_{gK}^* \circ \nu \colon \sF(\sM) \to \sF(G \times \sM) \to \sF(gK \times \sM)
    \]
    sends $\sfs \in \sF^K(\sM)$ to $\Im \pr_{\sM}^*(\sfs)$.  
    Since $\sF$ is assumed to be a diffeological space, this means that $\nu \sfs$ is invariant under the pull-back by the right $K$-action onto $G$. Now, take a local trivialization $(\{\sU_i , \varphi_i\})$ of the principal $K$-bundle $G \times \sM \to G \times_K \sM$. Then, $\{\sU_i\}$ is an open cover of $G \times_K\sM$ and the family of sections $\varphi_i^*\sfs \in \sF(\sU_i)$ satisfy the gluing condition. 
    We define $\mathrm{ind}_K^G (\sfs)$ by the section obtained by gluing them. 
    Again, by using the assumption of diffeological space, a pointwise comparison shows the bijectivity of $\mathrm{ind}_K^G$.
\end{proof}

\subsection{Local commutative H-monoids and H-groups}\label{subsubsection:Hmonoid}
A sheaf $\sF$ is said to be an H-monoid if it is equipped with a morphism $\mu \colon \sF \times \sF \to \sF$ and the smooth homotopies $\mu \circ (\id \times \iota) \simeq\id$ and $\mu \circ (\mu \times \id )  \simeq\mu \circ (\id \times \mu)$, where $\iota \colon \mathbf{1} \to \sF$ denotes the initial morphism. This $\mu$ induces the monoid structure on the set $\sF[\sM]$ for each $\sM \in \Man$. 
An H-monoid $(\sF,\mu)$ is called commutative if it is equipped with a homotopy $\mu \simeq\mu \circ \flip$. We introduce their relaxed versions when a sheaf $\sF$ is obtained as a colimit. 

For a directed set $I$, we write $NI$ for its nerve, the simplicial set given by $NI[n] = \Hom ([n],I) $ (where the right hand side means the set of ordered maps). 
For a projective system $\sG_{\bullet} \colon I^{\mathrm{op}} \to \Sh(\Man)$, we define the bar resolution $B(I, \sG_{\bullet})$ as the simplicial set whose set of $n$-cells are given by
\[
    B(I,\sG_{\bullet })[n] \coloneqq \bigsqcup_{i_0 \leq \cdots \leq i_n } \Sing \sG_{i_0} [n] ,
\]
with appropriate morphisms. It has a canonical projection $B(I, \sG) \to B(I,\mathbf{1}_{\bullet})= NI$ induced from the terminal map $\sG_{\bullet} \to \mathbf{1}$. 
If $\{ \sF_{i}\}_{i \in I}$ is an inductive system, then the corresponding system of internal Hom sheaves $\sHom(\sF_{\bullet}, \sG)$ forms a projective system indexed by $I^{\mathrm{op}}$.

\begin{defn}\label{defn:local.commutative.Hmonoid}
    Let $\{ \sF_i\}_{i \in I}$ be a net of sheaves on $\Man$ and let $\sF \coloneqq \colim_{i} \sF_i$. 
    We say that two morphisms $\phi_0,\phi_1 \colon \sF \to \sG$ are \emph{locally homotopic} if there is a simplicial section
    \[
     \tilde{\phi} \in \Gamma ( NI, B(I, \sHom (\sF_{\bullet}, \cP \sG))
    \]
    such that $\ev_0 \tilde{\phi}_{i_0,\cdots,i_n} = \phi_0$ and $\ev_1 \tilde{\phi}_{i_0,\cdots,i_n}= \mu (\iota \times \mu)$ independent of $\{ i_0, \cdots , i_n\} \in NI$. 
    
    We say that $\mu \colon \sF \times \sF \to \sF$ is \emph{locally homotopy associative} if there is a local homotopy connecting $\mu (\mu \times \iota)$ and $\mu(\iota \times \mu)$. 
    The local homotopy unit condition and the local homotopy commutativity are defined similarly. 
    We say that $(\sF,\mu)$ is a \emph{local commutative H-monoid} if it is locally homotopy associative, locally homotopy commutative, and satisfies the local unit condition.
\end{defn}

Explicitly, $\mu$ is locally homotopy associative if the following conditions hold: For any $i_0 \leq \cdots \leq  i_n \in I$, there is a smooth $n$-homotopy $\eta_{i_0,\cdots,i_n} \colon \sF_{i_0}^3 \to \sHom (\Delta_n^e, \cP \sF)$ such that $\ev_0 \eta_{i_0,\cdots,i_n}=\mu (\mu \times \id) $, $\ev_1 \eta_{i_0,\cdots,i_n} = \mu(\id \times \mu)$, and $f^*\eta_{i_0,\cdots,i_n} = \eta_{f^*(i_0,\cdots,i_n)}$ for any $f \colon [k] \to [n]$.     
Roughly, this means that the product $\mu|_{\sF_i} \colon \sF_i \times \sF_i \to \sF$ has a coherence homotopy for each $i \in I$, and such homotopies defined for different $i,j \in I$ are consistent up to homotopy. 
If $\mu$ is locally homotopy associative, then the homotopy colimit $\hocolim \sF_i $ is homotopy associative. 

\begin{rmk}\label{rmk:local.Hmonoid.structural.homotopy}
    Let $\sF =\colim _i \sF_i $ be a local commutative H-monoid. Then, for any $\sM \in \Man$ and $\sfs_1,\sfs_2,\sfs_3 \in \sF(\sM)$, there exists a smooth homotopy connecting  $\mu ( \mu(\sfs_1 , \sfs_2), \sfs_3) $ and $\mu ( \sfs_1, \mu(\sfs_2, \sfs_3)) $. If $\sM$ is compact, it immediately follows from the fact that $\sfs_1,\sfs_2,\sfs_3 \in \sF_i(\sM)$ for some $i \in I$. If not, the desired homotopy is obtained by using the following data; an open cover $\sM = \bigcup_{i \in J} \sU_j$, a partition of unity $\rho_j$ subordinate with it, $i(j) \in I$ for each $j \in \bN$ such that $i(j_1) \leq i(j_2)$ if $j_1 \leq j_2$ and $\sfs_k|_{\sU_j} \in \sF_{i(j)}(\sU_j)$. Indeed, the smooth section
    \[
    \tilde{\eta}(\sfs_1,\sfs_2,\sfs_3) (p) \coloneqq \eta_{i_0,\cdots, i_n}(\rho_{i_0}(p),  \cdots, \rho_{i_n}(p)) (\sfs_1,\sfs_2,\sfs_3)
    \]
    on the open subset $\{ p \in \sM \mid \rho_i(p) \neq 0 \text{ unless $i \in \{ i_0,\cdots,i_n \}$} \}$ are glued together, which gives the desired smooth homotopy. 
    Furthermore, any two such homotopies differ only by a choice of partitions of unity (after replacing $i(j)$'s with sufficiently large ones if necessary. Hence, they are homotopic to each other.
    Similarly, there are smooth homotopies $\mu(\sfs ,\mathbf{1}) \simeq\sfs$ and $\mu(\sfs_1,\sfs_2) \simeq\mu(\sfs_2,\sfs_1)$. 
\end{rmk}

Since we are dealing with $\Set_*$-valued sheaves, the basepoint $\mathbf{1}_{\sF} \in \sF(\pt)$ should play the role of the unit of a local commutative H-monoid $\sF$. It immediately satisfies that $\mu(\mathbf{1}_{\sF},\mathbf{1}_{\sF}) =\mathbf{1}_{\sF}$. Moreover, the structural homotopies $\mu (\mathbf{1}_{\sF} , \mathbf{1}_{\sF}) \simeq\mathbf{1}_{\sF}$ and $\mu (\mathbf{1}_{\sF}, \mathbf{1}_{\sF}) \simeq\mu \circ \flip (\mathbf{1}_{\sF},\mathbf{1}_{\sF})$ are the constant homotopies. 

Let us say that the unit $\mathbf{1}_{\sF}$ of a local commutative H-monoid $\sF = \colim \sF_i$ is \emph{weakly strict} if, for any $\sfs, \sfu \in \sF(\sM)$, the following loops of structural homotopies (obtained by \cref{rmk:local.Hmonoid.structural.homotopy} uniquely up to homotopy) are filled by a $2$-homotopy;
    \begin{itemize}
        \item $ \mu(\sfs, \sfu) \simeq \mu(\mathbf{1}_{\sF}, \mu(\sfs,\sfu)) \simeq \mu(\mu(\mathbf{1}_{\sF}, \sfs),\sfu) \simeq \mu(\sfs,\sfu)$, 
        \item $\mu(\sfs, \sfu) \simeq \mu(\sfs, \mu(\mathbf{1}_{\sF},\sfu)) \simeq \mu(\mu(\sfs, \mathbf{1}_{\sF}),\sfu) \simeq \mu(\sfs,\sfu)$, 
        \item $\mu(\sfs, \sfu)  \simeq \mu(\sfs, \mu(\sfu,\mathbf{1}_{\sF})) \simeq \mu(\mu(\sfs,\sfu),\mathbf{1}_{\sF}) \simeq \mu(\sfs,\sfu)$. 
    \end{itemize}

\begin{lem}\label{lem:Hmonoid.invertible}
    Let $(\sF , \mu )$ be a local commutative H-monoid object of $\Sh (\Man) $. The following hold.
    \begin{enumerate}
        \item For $\sfs \in \sF(\sM)$, $[\sfs]\in \sF[\sM]$ has an inverse if and only if $[\sfs(p)] \in \sF[\pt]$ has an inverse for any $p \in \sM$.  
        \item For $\sfs \in \sF(\sM,\sA)$, $[\sfs]\in \sF[\sM,\sA]$ has an inverse if and only if $[\sfs(p)] \in \sF[\pt]$ has an inverse for any $p \in \sM$. 
    \end{enumerate} 
\end{lem}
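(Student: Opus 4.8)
The plan is to translate the statement into a question about H\nobreakdash-spaces via the realization $\mathrm{F} \coloneqq |\Sing \sF|$ and to reduce it to the classical fact that a group-like homotopy-associative H\nobreakdash-space admits a homotopy inverse. The ``only if'' direction is immediate in both parts: for $p \in \sM$, restriction along $\{p\}\hookrightarrow\sM$ is compatible with $\mu$ and carries smooth homotopies to smooth homotopies, hence induces a monoid homomorphism $\sF[\sM]\to\sF[\pt]$, and likewise $\sF[\sM,\sA]\to\sF[\pt]$; monoid homomorphisms preserve invertibility.

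For ``if'', I would first observe that by \cref{rmk:local.Hmonoid.structural.homotopy} the set $\sF[\sM]$ is a commutative monoid with unit $[\mathbf{1}]$ for every $\sM\in\Man$, and that this structure is natural in $\sM$ since $\mu$ is a morphism of sheaves. Through the isomorphisms $\sF[\sM]\cong[\sM,\mathrm{F}]$ and $\sF[\sM,\sA]\cong[(\sM,\sA),(\mathrm{F},\pt)]$ of \cref{thm:MW}, this makes $[-,\mathrm{F}]$ a functor into commutative monoids. A standard argument then upgrades $\mathrm{F}$ to a homotopy-associative, homotopy-commutative H\nobreakdash-space with unit $\pt$: one represents the element $\mathrm{pr}_1+\mathrm{pr}_2\in[\mathrm{F}\times\mathrm{F},\mathrm{F}]$ by a map $m\colon\mathrm{F}\times\mathrm{F}\to\mathrm{F}$, and naturality forces $m\circ(f,g)\simeq f+g$ for all $f,g\colon\sX\to\mathrm{F}$, so that the H\nobreakdash-structure recovers the $\mu$-monoid structure on each $[\sM,\mathrm{F}]\cong\sF[\sM]$.

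Next I would introduce $\mathrm{F}^\times\subseteq\mathrm{F}$, the union of those path components whose class in $\pi_0(\mathrm{F})\cong\sF[\pt]$ is a unit of this commutative monoid. Since a product of units is a unit and $\pt$ lies in a unit component, $\mathrm{F}^\times$ is an open-and-closed sub-H\nobreakdash-space containing $\pt$, with $\pi_0(\mathrm{F}^\times)$ equal to the group of units of $\pi_0(\mathrm{F})$; being a homotopy-associative group-like H\nobreakdash-space, it admits a homotopy inverse $\iota\colon\mathrm{F}^\times\to\mathrm{F}^\times$, whence $[\sX,\mathrm{F}^\times]$ and $[(\sX,\sB),(\mathrm{F}^\times,\pt)]$ are groups for all $(\sX,\sB)$, with inversion given by post-composition with $\iota$. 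Now, given $\sfs\in\sF(\sM)$ with $[\sfs(p)]$ invertible in $\sF[\pt]$ for every $p$, a representing map $f\colon\sM\to\mathrm{F}$ of $[\sfs]$ sends every point of $\sM$ into a unit component, i.e.\ $f(\sM)\subseteq\mathrm{F}^\times$; as $\mathrm{F}^\times$ is clopen, $f$ factors through $\mathrm{F}^\times$, so $[f]\in[\sM,\mathrm{F}^\times]$ is invertible and maps under the monoid homomorphism $[\sM,\mathrm{F}^\times]\to[\sM,\mathrm{F}]$ to $[f]=[\sfs]$, which is therefore invertible in $\sF[\sM]$. Part (2) runs identically with $\sM$ replaced by the pair $(\sM,\sA)$: the hypothesis $\sfs|_\sA=\mathbf{1}$ lets one take the representative as a map of pairs $(\sM,\sA)\to(\mathrm{F},\pt)$, which then lands in $(\mathrm{F}^\times,\pt)$, and $[(\sM,\sA),(\mathrm{F}^\times,\pt)]$ is a group.

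The steps that require the most care are the two ``soft'' inputs I have leaned on: first, verifying the compatibilities — that the H\nobreakdash-space structure extracted from the commutative-monoid-valued functor $[-,\mathrm{F}]$ genuinely induces the $\mu$-monoid structure on each $\sF[\sM]\cong[\sM,\mathrm{F}]$, and that $\mathrm{F}^\times$ is a sub-H\nobreakdash-space on which the inclusion is a monoid map; and second, the production of the homotopy inverse $\iota$, which is classical but not formal for disconnected H\nobreakdash-spaces (one uses that the shear map $(a,b)\mapsto(a,m(a,b))$ is a weak equivalence because $\pi_0$ is a group and left translations restrict to homotopy equivalences of path components). Everything else is bookkeeping.
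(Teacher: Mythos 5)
Your ``if'' direction is fundamentally the shear-map / Eckmann--Hilton argument that the paper itself uses, but the paper works at the level of sheaves and you pass through the realization $\mathrm{F}=|\Sing\sF|$, which introduces a gap tied to the word ``local'' in ``local commutative H-monoid.'' The paper's proof defines the shear morphism $\Phi\colon\prescript{\circ}{}{\sF}\times\prescript{\circ}{}{\sF}\to\prescript{\circ}{}{\sF}\times\prescript{\circ}{}{\sF}$, $(\sfs,\sfs')\mapsto(\sfs,\mu(\sfs,\sfs'))$, as a morphism of \emph{sheaves}, and checks it is a weak equivalence by computing $\Phi_*$ on homotopy groups: for each $n$ one works with $\sF[S^n]$, where \cref{rmk:local.Hmonoid.structural.homotopy} supplies, from the bar-resolution data and a partition of unity \emph{on the manifold $S^n$}, the structural homotopies needed for the Eckmann--Hilton argument. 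This is precisely what makes the ``local'' definition of the H-monoid structure usable.

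Your argument, by contrast, asks for an honest H-space structure with a homotopy unit on the CW-complex $\mathrm{F}$ (or $\mathrm{F}^\times$) itself, so that you may invoke the classical ``homotopy-associative group-like H-space admits a homotopy inverse'' result. But $\mathrm{F}$ is not the realization of any single manifold, so \cref{rmk:local.Hmonoid.structural.homotopy} does not directly apply: the coherence homotopies in \cref{defn:local.commutative.Hmonoid} are given only as a simplicial section over the nerve of the index set $I$ -- compatibly on each $\sF_i$, but not as a single morphism $\cP\sF\to\sF$. Gluing these into a homotopy on $\mathrm{F}\times[0,1]$ requires a (homotopy) colimit argument that your sketch elides, and which is strictly more than the one-manifold-at-a-time input the paper needs. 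Relatedly, your step ``represent $\mathrm{pr}_1+\mathrm{pr}_2\in[\mathrm{F}\times\mathrm{F},\mathrm{F}]$'' presupposes the monoid-valued functor $\sF[\,\blank\,]$ has been extended from $\Man$ to CW-pairs and is Brown-representable there; this is true, but it is exactly the uniformity-over-non-manifold-domains problem in another guise, and the paper sidesteps it entirely. Your route can be repaired (e.g., by running the shear-map weak-equivalence check on $\pi_n$ via $\sF[S^n]$, as the paper does, rather than via a global H-structure on $\mathrm{F}$), but as written the H-space construction on $\mathrm{F}$ is a genuine gap. The ``only if'' direction and the reduction of (2) to (1) are correct.
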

We say that a section $\sfs \in \sF(\sM)$ is invertible if so is its smooth homotopy class $[\sfs] \in \sF[\sM]$. 
By (1), the set of invertible sections $\sF^\times (\sM)$ form a sheaf on $\Man$. 
Moreover, for any $\sM \in \Man$ and a closed subset $\sA \subset \sM$, the sets $\sF^{\times}[\sM]$ and $\sF^{\times}[\sM,\sA]$ are imposed the structure of abelian groups.
\begin{proof}
This lemma is proved in an analogous way to the following fact known for H-spaces: If a connected CW-complex is an H-monoid, then it is indeed an H-group (see e.g.~\cite{srinivasAlgebraicKtheory2008}*{Corollary A.47}).

Let $\sM $ be a connected manifold and let $\sfs \in \sF(\sM)$ such that $[\sfs(p)] \in \sF[\pt]$ is invertible for any $p \in \sM$. By definition, there is $\sfu \in \sF(\pt)$ such that $[\mu(\sfs(p_0),\sfu )]= [\mathbf{1}_{\sF}] \in \sF[\pt]$. 
By replacing $\sfs$ to $\mu(\sfs,\sfu)$ if necessary, we may assume that $[\sfs(p)] = [\mathbf{1}_{\sF}]$ for any $p \in \sM$. 
Let $\prescript{\circ}{}{\sF}$ denote the subsheaf of $\sF$ consisting of sections $\sfs \in \sF^{\times}(\sM)$ with $[\sfs(p)]=[\mathbf{1}_{\sF}]$ for any $p \in \sM$. By definition, we have $\pi_0(\prescript{\circ}{}{\sF}) =0$. 
Consider the morphism
    \begin{align*}
    \Phi \colon (\prescript{\circ}{}{\sF} \times \prescript{\circ}{}{\sF})(\sM) \to (\prescript{\circ}{}{\sF} \times \prescript{\circ}{}{\sF})(\sM), \quad \Phi(\sfs,\sfs') \coloneqq (\sfs,\mu(\sfs,\sfs')).
    \end{align*}
Note that there is a smooth homotopy
\[
    \mu(\sfs,\sfs') \simeq \mu(\sfs,\sfs') \circ \mathbf{1}_{\sF} = \mu (\sfs \circ \mathbf{1}_{\sF}, \sfs' \circ \mathbf{1}_{\sF}) \simeq \mu (\sfs \circ \mathbf{1}_{\sF}, \mathbf{1}_{\sF} \circ \sfs' ) = \mu(\sfs ,\mathbf{1}_{\sF}) \circ \mu(\mathbf{1}_{\sF},\sfs') \simeq \sfs \circ \sfs' 
\]
by \cref{rmk:local.Hmonoid.structural.homotopy}. Therefore, the induced morphism in homotopy groups $\pi_n(\sF \times \sF)$ for $n > 0$ satisfy $\Phi_*([\sfs],[\sfs']) = ([\sfs],[\sfs]+[\sfs'])$ on the homotopy groups, and hence is an isomorphism. 
Thus, by \cref{thm:MW}, the morphism $\Phi_* $ is bijective for any $\sM$. 
The smooth homotopy class $\Phi_*^{-1}([\sfs],\mathbf{1}_{\sF})$ is represented by some section $\check{s} \in \sF(\sM)$ as $([\sfs],[\check{\sfs}])$. This concludes that $\mu(\sfs,\check{\sfs})$ is smoothly homotopic to $\mathbf{1}_{\sF}$.  
\end{proof}
\begin{lem}\label{lem:bold.invertible.sheaf}
    Let $\sF$ be a local commutative H-monoid in $\Sh(\Man)$ with a weakly strict unit. Let $\boldsymbol{\sF}^\times $ be the sheaf on $\Man$ given by
  \begin{align*}
  \boldsymbol{\sF}^\times (\sM) \coloneqq  \{ (\sfs,\check{\sfs},\bar{\sfs}) \in \sF^\times(\sM) \times \sF^\times (\sM) \times \cP\sF^{\times}(\sM ) \mid \text{$\bar{\sfs} \in \cP_{\mu (\sfs,\check{\sfs} ), \mathbf{1}_{\sF}} \sF(\sM)$}  \}. 
  \end{align*}
  Then $\boldsymbol{\sF}^{\times }$ is weakly equivalent to $\sF^{\times}$ by the forgetful morphism $\mathsf{fg} \colon \boldsymbol{\sF}^\times \to \sF^\times $ given by $\mathsf{fg}(\sfs,\check{\sfs},\bar{\sfs}) = \sfs$. 
\end{lem}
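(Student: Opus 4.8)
The plan is to exhibit $\mathsf{fg}$ as a fibration in $\Sh(\Man)$ whose fibers are contractible, which by \cref{lem:survey.homotopy} (2) yields the weak equivalence. First I would verify that $\mathsf{fg}$ is a fibration, i.e.\ it has the smooth homotopy lifting property. Given $\sfs \in \boldsymbol{\sF}^\times(\sM)$ over $\mathsf{fg}(\sfs) \in \sF^\times(\sM)$ together with a smooth path $\gamma \in \cP\sF^\times(\sM)$ with $\ev_0\gamma = \mathsf{fg}(\sfs)$, I need to lift $\gamma$ to a path in $\boldsymbol{\sF}^\times$. Since a triple $(\sigma,\check\sigma,\bar\sigma)$ records an inverse $\check\sigma$ and a chosen null-homotopy $\bar\sigma$ of $\mu(\sigma,\check\sigma)$, the lift along $\gamma$ can keep the component $\check\sigma$ constant (equal to $\check{\mathsf{fg}(\sfs)}$) and transport $\bar\sigma$ along $\gamma$: concretely, $t \mapsto (\gamma(t), \check\sigma, \bar\sigma_t)$ where $\bar\sigma_t \in \cP_{\mu(\gamma(t),\check\sigma),\mathbf 1_\sF}\sF(\sM)$ is obtained by gluing $\bar\sigma$ with the path $\mu(\gamma|_{[0,t]},\check\sigma)$ run backwards; the composition-of-paths operation available in $\cP\sF$ (via the gluing condition of sheaves on $\Man$) makes this a valid section, and it is smooth in the $[0,1]_s$-direction as well, so it defines an element of $\cP\boldsymbol{\sF}^\times(\sM)$ lifting $\gamma$. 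This establishes the fibration property.

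Next I would identify the fiber $\mathsf{fg}^{-1}(\sfs_0)$ over a fixed invertible section $\sfs_0 \in \sF^\times(\pt)$ and show it is weakly contractible. By definition the fiber over the basepoint is $\{(\check{\sfs},\bar{\sfs}) : \check\sfs \in \sF^\times(\sM),\ \bar\sfs \in \cP_{\mu(\mathbf 1_\sF,\check\sfs),\mathbf 1_\sF}\sF(\sM)\}$ — the space of ``inverses together with a chosen witness''. This is where the weakly strict unit hypothesis enters. Using the structural homotopy $\mu(\mathbf 1_\sF,\check\sfs) \simeq \check\sfs$ from \cref{rmk:local.Hmonoid.structural.homotopy}, the fiber retracts onto the space of pairs $(\check\sfs,\bar\sfs')$ with $\bar\sfs'$ a path from $\check\sfs$ to $\mathbf 1_\sF$. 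Now the projection $(\check\sfs,\bar\sfs') \mapsto \ev_0\bar\sfs' = \check\sfs$ is itself a fibration with fiber the based path space $\cP_{\mathbf 1_\sF}\sF$, which is contractible by the standard path-space argument (shrinking paths, \cref{lem:survey.homotopy} (1)). Hence the fiber of $\mathsf{fg}$ deformation retracts onto a point through a chain of weak equivalences; this is the step where one must be careful that all the required $2$-homotopies filling the triangle loops exist, which is precisely what ``weakly strict unit'' guarantees — the three listed $2$-cells are exactly the coherence needed to make the retraction well-defined up to homotopy and independent of choices (of partitions of unity, of the order in which structural homotopies are composed).

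Finally, with $\mathsf{fg}$ a fibration and all fibers weakly contractible, the long exact sequence of \cref{lem:survey.homotopy} (2) forces $\mathsf{fg}_* \colon \pi_n(\boldsymbol{\sF}^\times) \to \pi_n(\sF^\times)$ to be an isomorphism for all $n$, i.e.\ $\mathsf{fg}$ is a weak equivalence. To complete the argument one should also check surjectivity of $\mathsf{fg}$ on sections up to homotopy — but this is immediate, since any invertible $\sfs \in \sF^\times(\sM)$ admits \emph{some} inverse $\check\sfs$ (by \cref{lem:Hmonoid.invertible} applied pointwise, choosing a section of inverses works because $\sF^\times$ is a sheaf and the choice can be made locally and glued up to homotopy) and \emph{some} null-homotopy $\bar\sfs$ of $\mu(\sfs,\check\sfs)$, so $(\sfs,\check\sfs,\bar\sfs)$ lies in $\boldsymbol{\sF}^\times(\sM)$ and maps to $\sfs$. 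The main obstacle I anticipate is the bookkeeping in the fibration-lifting step: ensuring that the transported witness path $\bar\sigma_t$ assembles into a genuine smooth section over $\sM \times [0,1]_t \times [0,1]_s$ rather than just a pointwise family, which requires invoking the gluing/composition structure on $\cP\sF$ carefully and, if $\sM$ is non-compact, a partition-of-unity argument in the spirit of \cref{rmk:local.Hmonoid.structural.homotopy} to reconcile the local-colimit nature of $\sF$.
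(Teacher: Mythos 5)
Your proposal correctly identifies the overall strategy and carries out the fibration step in essentially the same way the paper does: keep the inverse component constant and extend the witness path by composing structural homotopies. The issue is in the contractibility step.

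The long exact sequence of \cref{lem:survey.homotopy}~(2) must be applied at every basepoint, so one needs the fiber $\mathsf{fg}^{-1}(\sfs_0)$ to be weakly contractible for \emph{every} $\sfs_0 \in \sF^\times(\pt)$, not just $\sfs_0 = \mathbf{1}_\sF$ — and $\sF^\times$ is emphatically not path-connected in the intended applications ($\pi_0(\sIP_d)\neq 0$), so you cannot conflate these fibers by transport along a path in the base. Your retraction of the fiber onto $\cP_{\blank,\mathbf{1}_\sF}\sF^\times$ hinges on precomposing $\bar\sfs$ with (a reversal of) the unit homotopy $\eta\colon \mu(\mathbf{1}_\sF,\check\sfs)\simeq\check\sfs$, but that homotopy relates $\mu(\mathbf{1}_\sF,\check\sfs)$ to $\check\sfs$ and is simply not available when the first argument is a general invertible $\sfs_0$: the fiber over $\sfs_0$ consists of $(\check\sfs,\bar\sfs)$ with $\bar\sfs\colon \mu(\sfs_0,\check\sfs)\leadsto\mathbf{1}_\sF$, and there is no structural homotopy collapsing $\mu(\sfs_0,\blank)$. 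The paper's proof therefore proceeds differently: it fixes a reference $(\check\sfs_0,\bar\sfs_0)$ over $\sfs_0$, multiplies through by $\mu(\sfs_0,\check\sfs_0)$, and threads $\bar\sfs$ and $\bar\sfs_0$ together via a chain of associativity, commutativity, and unit homotopies. The weakly strict unit hypothesis is used precisely to fill the resulting $2$-dimensional loops of structural homotopies (the commuting squares of the displayed diagram in the paper's proof). Your sketch invokes ``weakly strict unit'' in prose but never actually deploys it in the argument as written — the basepoint-only retraction you describe makes no use of it — which is a telltale sign that the hard case ($\sfs_0\neq\mathbf{1}_\sF$) has been skipped. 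To repair the proof you would either need a separate argument identifying the fiber over $\sfs_0$ with the fiber over $\mathbf{1}_\sF$ (which itself would need the same coherence machinery), or carry out the explicit deformation of a section $(\check\sfs,\bar\sfs)\in\mathsf{fg}^{-1}(\sfs_0)(S^n)$ to the constant section $(\check\sfs_0,\bar\sfs_0)$, as the paper does.
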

We call such $\boldsymbol{\sF}^\times$ the \emph{refinement} of a commutative H-group $\sF^\times$ with a weakly strict limit. 
\begin{proof}
  First, the morphism $\mathsf{fg}$ is a fibration. Indeed, for $\sfs \in \sF^\times (\sM \times [0,1]_u)$, a lift $(\sfs|_{\sM}, \check{\sfs}|_{\sM}, \bar{\sfs}|_{\sM})  \in \boldsymbol{\sF}^\times (\sM) $ of $\sfs|_{\sM}$ extends to $\sM \times [0,1]_u$ by $\check{\sfs} (p,u)\coloneqq \check{\sfs}|_{\sM}(p)$ and  
  \begin{align} 
  \bar{\sfs} \coloneqq  \text{(composition of smooth homotopies $\mu (\sfs|_{\sM \times \{u\}},\check{\sfs}|_{\sM}) \simeq \mu(\sfs|_{\sM}$, $\check{\sfs}|_{\sM}) \simeq \mathbf{1}_{\sF}$)}. \label{eqn:fibration.forget}
  \end{align}
  The fiber $\mathsf{fg}^{-1}(\sfs_0)$ consists of pairs $(\check{\sfs},\bar{\sfs})$ such that $\bar{\sfs}$ is a smooth homotopy of $\mu (\sfs_0, \check{\sfs})$ and $\mathbf{1}_{\sF}$. 
  Therefore, it is enough to show that any section $(\check{\sfs},\bar{\sfs}) \in \mathsf{fg}^{-1}(\sfs_0)(S^n)$ is smoothly homotopic to the constant section $(\check{\sfs}_0,\bar{\sfs}_0)$. 

  In the rest of the proof, we present a smooth homotopy by an arrow. 
  First, let $\bar{\bar{\sfs}}_0$ be the composition of the smooth homotopies
  \begin{align*}
  \bar{\bar{\sfs}}{}^{(1)}_0 \colon {}&{} \mu(\sfs_0,\check{\sfs}_0) \xrightarrow{\mathrm{unit}} \mu (\mathbf{1}_{\sF}, \mu(\sfs_0,\check{\sfs}_0)) \xrightarrow{\mu(\bar{\sfs}, \mu(\sfs_0,\check{\sfs}_0))^{\mathrm{rev}}}  \mu (\mu(\sfs_0 ,\check{\sfs}), \mu(\sfs_0,\check{\sfs}_0)) \xrightarrow{\mathrm{assoc}} \mu(\sfs_0,\mu(\check{\sfs},\mu(\sfs_0,\check{\sfs}_0))), \\
  \bar{\bar{\sfs}}{}^{(2)}_0 \colon {}&{}\mu(\sfs_0,\mu(\check{\sfs},\mu(\sfs_0,\check{\sfs}_0)))\xrightarrow{\mathrm{assoc}} \mu (\sfs_0, \mu (\mu(\check{\sfs},\sfs_0) , \check{\sfs}_0) ) \xrightarrow{\mathrm{comm}} 
    \mu ( \sfs_0, \mu (\mu(\sfs_0,\check{\sfs}) , \check{\sfs}_0) ) \\
     {}&{}  \hspace{40ex} \xrightarrow{\mu(\sfs_0,\mu(\bar{\sfs},\check{\sfs}_0))}   
    \mu (\sfs_0,\mu(\mathbf{1}_{\sF},\check{\sfs}_0)) \xrightarrow{\mathrm{unit}} \mu(\sfs_0,\check{\sfs}_0) \xrightarrow{\bar{\sfs}_0} \mathbf{1}_{\sF}.
  \end{align*}
  Here, $\mathrm{unit}$, $\mathrm{assoc}$, and $\mathrm{comm}$ denotes the structural homotopies.
    Then, the squares in the diagram of $1$-homotopies
   \begin{align}
   \begin{split}
   \xymatrix@C=3em{
   \mu(\sfs_0,\mu(\check{\sfs},\mu(\sfs_0,\check{\sfs}_0))) \ar[rr]^{\mu(\sfs_0,\mu(\check{\sfs}, \bar{\bar{\sfs}}_0))} \ar[d]^{\mathrm{assoc}}
   && \mu(\sfs_0,\mu(\check{\sfs} , \mathbf{1}_{\sF})) \ar[d]^{\mathrm{assoc}} \ar[r]^{\mathrm{unit}} & \mu(\sfs_0,\check{\sfs}) \ar@{=}[d] \ar[r]^{\bar{\sfs}} & \mathbf{1}_{\sF} \ar@{=}[d]  \\
   \mu(\mu(\sfs_0,\check{\sfs}), \mu(\sfs_0,\check{\sfs}_0)) \ar[rr]^{\mu(\mu(\sfs_0,\check{\sfs}), \bar{\bar{\sfs}}_0))} && \mu(\mu(\sfs_0,\check{\sfs}),\mathbf{1}_{\sF}) \ar[r]^{\mathrm{unit}} & \mu(\sfs_0,\check{\sfs})\ar[r]^{\bar{\sfs}} & \mathbf{1}_{\sF} 
   }
   \end{split} \label{eqn:diagram.homotopy.equivalence.invertible}
   \end{align}
   are filled by $2$-homotopies, where the weak strictness of the unit is used. 
   Hence the pair $(\check{\sfs}, \bar{\sfs}) \in \mathsf{fg}^{-1}(\sfs_0) (S^n)$ can be replaced under a smooth homotopy as
   \[
   (\check{\sfs}, \bar{\sfs}) \simeq (\check{\sfs}',\text{(the clockwise composition of \eqref{eqn:diagram.homotopy.equivalence.invertible})}) \simeq  (\check{\sfs}',\text{(the anticlockwise composition of  \eqref{eqn:diagram.homotopy.equivalence.invertible})}).
   \]
   where $\check{\sfs}' \coloneqq \mu(\check{\sfs},\mu(\sfs_0,\check{\sfs}_0))$. We write $\bar{\sfs}'$ for the anticlockwise composition of \eqref{eqn:diagram.homotopy.equivalence.invertible}. 
   Next, the squares in the following diagram are also filled by $2$-homotopies:
   \begin{align*}
   \xymatrix@C=2em{
   \mu(\sfs_0,\mu(\check{\sfs}, \mu(\sfs_0, \check{\sfs}_0))) \ar[r]^{\mathrm{assoc}} & \mu(\mu(\sfs_0,\check{\sfs}), \mu(\sfs_0,\check{\sfs}_0)) 
   \ar[rr]^{\mu(\bar{\sfs} , \mu(\sfs_0,\check{\sfs}))} \ar[d]^{\mu(\mu(\sfs_0,\check{\sfs}), \bar{\bar{\sfs}}{}_0^{(1)})} && 
   \mu(\mathbf{1}_{\sF}, \mu(\sfs_0,\check{\sfs}_0)) 
   \ar[r]^{\ \ \ \ \mathrm{unit}} \ar[d]^{\mu(\mathbf{1}_{\sF}, \bar{\bar{\sfs}}{}_0^{(1)})} & 
   \mu(\sfs_0,\check{\sfs}_0) \ar[d]^{\bar{\bar{\sfs}}{}_0^{(1)}} \\
   &\mu(\mu(\sfs_0,\check{\sfs}), \mu(\sfs_0,\check{\sfs}')) 
   \ar[rr]^{\mu(\bar{\sfs},\mu(\sfs_0,\check{\sfs}'))} && 
   \mu(\mathbf{1}_{\sF}, \mu(\sfs_0,\check{\sfs}') ) 
   \ar[r]^{ \ \ \ \ \mathrm{unit}} & 
   \mu(\sfs_0,\check{\sfs}').
   }
   \end{align*}
    After the postcomposition with $ \bar{\bar{\sfs}}_0^{(2)}$, the anticlockwise composition of this diagram is homotopic to the anticlockwise composition of \eqref{eqn:diagram.homotopy.equivalence.invertible}. 
    On the other hand, the clockwise composition is the composition of $\bar{\bar{\sfs}}{}_0^{(1)}$ and its reversion, and hence is homotopic to the constant path.  
    This shows that $(\check{\sfs}', \bar{\sfs}') \simeq (\check{\sfs}', \bar{\bar{\sfs}}{}^{(2)}_0)$ as section of $\mathsf{fg}^{-1}(\sfs_0)$. 
    Finally, notice that the homotopy $\bar{\bar{\sfs}}{}_0^{(2)}$ is the composition of $\bar{\sfs}_0$ and a homotopy of the form $\mu(\sfs_0, \sfu)$, where $\sfu$ is a certain homotopy connecting $\check{\sfs}'$ and $\check{\sfs}_0$. A reparametrization of $\sfu$ gives a smooth homotopy $(\check{\sfs}',\bar{\bar{\sfs}}{}_0^{(2)}) \simeq (\check{\sfs}_0,\bar{\sfs}_0)$.  
    This shows the lemma. 
\end{proof}

\subsection{The sheaf of invertible gapped lattice Hamiltonians}\label{subsection:sheaf.lattice}
We give a definition of the space of IG UAL Hamiltonians as a sheaf $\sIP_d$ on $\Man$. We then apply \cref{thm:MW} to obtain the topological space $\IP_d$, which inherits the smooth homotopy type of $\sIP_d$.  

\begin{defn}\label{notn:internal.degree}
    We call $\fR$ a set of \emph{bosonic internal degrees of freedom} if it is a countable set of pairs $\lambda=(\sH_\lambda,\Omega_\lambda)$, where $\sH_\lambda $ is a Hilbert space with $2 \leq \dim \sH_\lambda < \infty $ and $\Omega_\lambda$ is a fixed unit vector. 
    For $\lambda \in \fR$, let $\cA_{\lambda} \coloneqq \cB(\sH_{\lambda})$ and $\sfh_{\lambda}\coloneqq  2(1-\Omega_{\lambda} \otimes \Omega_{\lambda}^*)$. 
\end{defn}

\begin{rmk}
    The most natural choice for $\fR$ is to contain one representative from each isomorphism class of internal degrees of freedom, which forms a countable set.
    Nevertheless, we retain the information of the choice of $\fR$ here because we are aware of previous research on quantum cellular automata (QCA).
    The connection between quantum spin systems and QCAs has been considered (we refer to \cites{ranardConverseLiebRobinson2022,haahNontrivialQuantumCellular2023,haahInvertibleSubalgebras2023,kapustinAnomalousSymmetriesQuantum2024}). 
    The GNVW index \cite{grossIndexTheoryOne2009}, which is a topological invariant that distinguishes the connected components of the space of $1$-dimensional QCAs, takes values in the additive group $\bZ[\{ \log p_i\}]$ generated by the logarithms of prime numbers $ p_i$ appearing in the prime factor of the dimensions of the internal degrees of freedom.
\end{rmk}

\begin{defn}\label{defn:polynomially.proper}
    Let $X$ and $Y$ be proper metric spaces. We say that a map $F \colon X \to Y$ is \emph{linearly proper} if, for any (some) $\bm{x}_0 \in X$, there is $m >0$ such that the diameter of the inverse image $F^{-1}(B_r(\bm{x}))$ is not more than $m \cdot (1+r)$.    
\end{defn}
Typically, the projection $\pr_{\bR^d} \colon X \subset \bR^{d+l} \to \bR^d$ is linearly proper if $X$ is contained in the conical region $\{ (\bm{x},\bm{y}) \in \bR^{d+l} \mid \rmd(\bm{y},\bm{y}_0) \leq m(1+ \rmd(\bm{x},\bm{x}_0)) \}$ (the unshaded region in \cref{fig:transverse.cone} projected to the $y$-axis). 
We remark that the composition of two linearly proper maps is again linearly proper. 

\begin{defn}\label{defn:lattice.set}
For a countable set $\fR$, let $\fL_d=\fL_{\fR, d}$ denote the set of subsets $\Lambda \subset \bR^{d+\infty} \times \fR \times \bN$ satisfying the following conditions:
\begin{enumerate}
    \item[(i)] The set $\Lambda $ is contained in $\bR^{d+l_\Lambda} \times \fR \times [0,n_{\Lambda} ]$ for some $l_{\Lambda} \in \bZ_{\geq 0}$ and $n_{\Lambda} \in \bZ_{\geq 0}$.
    \item[(ii)] The projection $\pr_{\bR^{d+l}} \colon \Lambda \to \bR^{d+l}$ is weakly uniformly discrete in the sense of \cref{subsection:coarse.lattice}, which imposes the pseudo-metric $\rmd$ on $\Lambda$.
    \item[(iii)] The map $\pr_{\bR^d} \colon \Lambda \to \bR^d$ is linearly proper.  
\end{enumerate}
\end{defn}
By (ii), $\Lambda \in \fL_d$ has polynomial growth with the growth rate less than $\kappa_{\Lambda} \cdot (1+r)^{d+l_{\Lambda}}$. Moreover, (iii) implies that there is $\kappa_{\Lambda}' >0$ such that 
\begin{align*}
    |\pr_{\bR^d}^{-1}(B_r(\pr_{\bR^d}(\bm{x})))| \leq |B_{m (1+r)}(\bm{x})| \leq \kappa_{\Lambda} \big( 1+ m (1+r) \big)^{d+l_{\Lambda}}\leq \kappa_{\Lambda}' (1+r)^{d+l_{\Lambda}}.
\end{align*}
\begin{defn}\label{defn:local.observable}
Let $\Lambda \in \fL_{\fR, d}$. 
For $\bm{x} \in \Lambda$, we write as $\lambda(\bm{x})\coloneqq \pr_{\fR}(\bm{x})$.
On each $\bm{x} \in \Lambda $, we place the local observable algebra $\cA_{\bm{x}}\coloneqq \cA_{\lambda(\bm{x})}$ and the trivial Hamiltonian $\sfh_{\bm{x}} \coloneqq \sfh_{\lambda(\bm{x})}$ with the ground state $\Omega_{\bm{x}} \coloneqq \Omega_{\lambda(\bm{x})}$.  
In the same way as \eqref{eqn:trivial.Hamiltonian}, set
\begin{align*}
    \cA_{\Lambda} \coloneqq \bigotimes _{\bm{x} \in \Lambda} \cA_{\bm{x}}, \quad  \sfh = (\sfh_{\bm{x}} )_{\bm{x} \in \Lambda} \in \fH_{\Lambda}^{\al}, \quad \omega_0\coloneqq \bigotimes_{\bm{x}} \langle \blank \Omega_{\bm{x}},\Omega_{\bm{x}} \rangle. 
\end{align*}
\end{defn}
As is observed in \cref{subsection:coarse.lattice}, $ \Lambda  \in \fL_{\fR,d}$ is of polynomial growth and has a brick with respect to the pseudo-metric $\rmd $ induced from the Euclidean metric of $\bR^{d+l}$.

\begin{defn}\label{defn:gapped.sheaf}
We define the following sheaves on $\Man$. 
\begin{enumerate}
    \item Let $\Lambda \in \fL_{\fR,d}$. 
    The sheaf of gapped UAL Hamiltonians $\sGP(\Lambda \midbar \blank)$ on $\Lambda$ is defined by 
    \begin{align*} 
    \sGP ( \Lambda \midbar \sM ) \coloneqq \{  \sfH \colon \sM \to \fH_{\Lambda}^{\al} \mid \text{ $\sfH$ is smooth in the sense of \cref{defn:smooth}} \}. 
    \end{align*}
    \item The sheaf $\sGP_{\fR,d}(\blank)$ on $\Man$ is defined by the colimit as
\begin{align*} 
    \sGP_{\fR,d} \coloneqq {}&  \colim_{\Lambda \in \fL_{\fR,d}} \sGP(\Lambda \midbar \blank) .
\end{align*} 
    Here, the morphisms $\sGP(\Lambda_1 \midbar \blank) \to \sGP(\Lambda_2 \midbar \blank) $ for $\Lambda_1 \leq \Lambda_2$, which forms an inductive system, is given by 
\begin{align*} 
    \sGP(\Lambda_1 \midbar \sM)  \ni \sfH \mapsto \sfH \boxtimes \sfh_{\Lambda_2 \setminus \Lambda_1 } \in \sGP(\Lambda_2 \midbar \blank ).
\end{align*}
Hereafter, we often abbreviate $\fR$ and write $\sGP_d$ unless the argument depends on a particular choice of such $\fR$. 
    \item For $\sfH \in \sGP_{\fR,d}(\pt)$, we call the smallest $\Lambda \in \fL_{\fR, d}$ such that $\sfH \in \fH_{\Lambda}^{\al}$ the \emph{support} of $\sfH$. 
\end{enumerate}
\end{defn}

\begin{rmk}\label{line.mu}
Throughout the paper, we fix a bijection $\fm \colon \bN \sqcup \bN \to \bN$.     
\end{rmk}

The product on $\fL_d$ is defined by 
\begin{align}
    \Lambda_1 \boxtimes \Lambda_2 \coloneqq (\id_{\bR^\infty} \times \id_{\fR} \times \fm) (\Lambda_1 \sqcup \Lambda_2) \in \fL_d,
    \label{eqn:composition.lattice}
\end{align}
where the disjoint union $\Lambda_1 \sqcup \Lambda_2 $ is regarded as the subset of $\bR^{d+\infty} \times \fR \times (\bN \sqcup \bN)$. 
The composition of quantum systems given in \eqref{eqn:Hamiltonian.product} and the map $\fm \colon \Lambda_1 \sqcup \Lambda_2 \to \Lambda_1 \boxtimes \Lambda_2$ gives a morphism of sheaves  
\begin{align} 
    \blank \boxtimes \blank  \colon \sGP(\Lambda_1 \midbar \blank) \times \sGP(\Lambda_2 \midbar \blank ) \to \sGP(\Lambda_1 \boxtimes \Lambda_2 \midbar \blank ).
    \label{eqn:composition.system}
\end{align}
By definition, this product commutes with the inclusions $\sGP(\Lambda_i \midbar \blank) \to \sGP(\Lambda_i' \midbar \blank )$ for $i=1,2$, and hence extends to the morphism of sheaves $\blank \boxtimes \blank \colon \sGP_d \times \sGP_d \to \sGP_d$.

\begin{prp}\label{lem:H-monoid}
    The composite operation $\boxtimes$ makes the sheaf $\sGP_d$ a local commutative H-monoid with a weakly strict unit in the sense of \cref{defn:local.commutative.Hmonoid}. 
\end{prp}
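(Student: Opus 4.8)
The plan is to verify the structural homotopies required by \cref{defn:local.commutative.Hmonoid} directly, exploiting the fact that $\sGP_d = \colim_{\Lambda} \sGP(\Lambda \midbar \blank)$ and that, on a compact manifold, any section lands in one $\sGP(\Lambda \midbar \blank)$. The key point is that on the level of a single lattice $\Lambda$, the operation $\boxtimes$ of \eqref{eqn:composition.system} is \emph{strictly} associative and commutative up to the relabeling bijection $\fm$: given three Hamiltonians with supports $\Lambda_1,\Lambda_2,\Lambda_3$, both $(\sfH_1 \boxtimes \sfH_2)\boxtimes \sfH_3$ and $\sfH_1 \boxtimes (\sfH_2 \boxtimes \sfH_3)$ are, up to the canonical identification of $(\Lambda_1\sqcup\Lambda_2)\sqcup \Lambda_3$ with $\Lambda_1 \sqcup(\Lambda_2\sqcup\Lambda_3)$ inside $\bR^{d+\infty}\times\fR\times\bN$, the Hamiltonian with local generator $\sfH_{1,\bm x}\otimes 1\otimes 1 + 1\otimes\sfH_{2,\bm x}\otimes 1 + 1\otimes 1\otimes\sfH_{3,\bm x}$ and ground state $\omega_{\sfH_1}\otimes\omega_{\sfH_2}\otimes\omega_{\sfH_3}$. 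The discrepancy is only the difference between two injections $\bN\sqcup\bN\sqcup\bN\to\bN$; the required homotopy is then produced by interpolating between these two embeddings of the (virtual) lattice into $\bR^{d+\infty}$, which is possible because $\bR^{d+\infty}$ is a colimit of Euclidean spaces and any two weakly uniformly discrete embeddings with uniformly bounded growth are connected by an affine isotopy after enlarging $l_\Lambda$. By \cref{rmk:polynomial.coarse.invariance}, the Fréchet topology on $\cA_\Lambda^{\al}$ is insensitive to bi-Lipschitz changes of the ambient metric, so such an isotopy yields a genuine smooth path in $\sGP(\Lambda' \midbar \sM)$ for a larger lattice $\Lambda'$.

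First I would make precise the "standard homotopy between two embeddings" lemma: given $\Lambda\in\fL_d$ with two weakly uniformly discrete, linearly proper realizations $\iota_0,\iota_1\colon\Lambda\to\bR^{d+\infty}$ that agree on the $\bR^d$-component, there is $\Lambda'\in\fL_d$ containing the graph and a smooth family $\iota_t$ through weakly uniformly discrete linearly proper maps connecting them; composing a given $\sfH\in\sGP(\Lambda\midbar\sM)$ with this family gives a smooth homotopy in $\sGP_d$ because the $C^k$-seminorms $\vvert\sfH\vvert_{\sU,C^k,\nu,\mu}$ depend only on the quasi-isometry class of $\rmd$, and the ground state condition (3) of \cref{defn:smooth} is preserved since the underlying C*-dynamical system is literally unchanged. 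Applying this to the relabelings $\fm$ yields: the local homotopy associativity (the homotopy $\eta_{i_0,\dots,i_n}$ of \cref{defn:local.commutative.Hmonoid} is obtained by taking, on each $\Lambda_{i_0}$, the affine interpolation of the two embeddings realizing the two bracketings, made coherent over $NI$ because the interpolations for nested lattices restrict compatibly), the local homotopy commutativity (interpolate the embedding $\fm(\Lambda_1\sqcup\Lambda_2)$ with $\fm(\Lambda_2\sqcup\Lambda_1)$ through a rotation in an extra $\bR^\infty$-direction), and the local unit condition (the embedding of $\Lambda\sqcup\emptyset$ is already equal to that of $\Lambda$, so $\boxtimes\sfh_\emptyset$ is the identity on the nose, and the needed homotopy is constant — but one must still account for $\sfh_{\Lambda_2\setminus\Lambda_1}$ for a nontrivial $\emptyset$-component, handled again by the interpolation lemma).

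Next I would check the weak strictness of the unit. Because $\boxtimes$ with the empty Hamiltonian $\sfh_\emptyset$ is \emph{strictly} the identity morphism of sheaves (the relabeling of $\Lambda\sqcup\emptyset$ is the identity), the three loops of structural homotopies listed before \cref{lem:Hmonoid.invertible} degenerate: each $\mathrm{unit}$-homotopy appearing in them is the constant homotopy, and the remaining associativity and commutativity homotopies are the affine interpolations described above, which are canonical up to contractible choice. Concretely, the loop $\mu(\sfs,\sfu)\simeq\mu(\mathbf 1,\mu(\sfs,\sfu))\simeq\mu(\mu(\mathbf 1,\sfs),\sfu)\simeq\mu(\sfs,\sfu)$ becomes, after discarding the constant $\mathrm{unit}$-pieces, a loop built from associativity interpolations of embeddings of the \emph{same} virtual lattice, hence is itself an affine loop in the space of linearly proper embeddings into $\bR^{d+\infty}$, which is contractible (it is an open convex-like subset of an infinite-dimensional affine space, or at least connected and simply connected by a direct Alexander-trick argument); the filling $2$-homotopy is the resulting contraction. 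The same reasoning disposes of the other two loops.

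The main obstacle I anticipate is the \emph{coherence over the nerve} $NI$ required by \cref{defn:local.commutative.Hmonoid} — it is not enough to produce a homotopy on each $\sGP(\Lambda\midbar\blank)$ separately; one needs a simplicial section of $B(I,\sHom(\sGP(\Lambda_\bullet\midbar\blank),\cP\sGP_d))$, i.e. homotopies $\eta_{i_0,\dots,i_n}$ compatible with all face and degeneracy maps of the poset $\fL_d$. This is where the affine/linear nature of the interpolations is essential: because the homotopy between two embeddings of $\Lambda$ into $\bR^{d+\infty}$ is \emph{canonically} the straight-line homotopy (after fixing the ambient embedding), and because the inclusion $\sGP(\Lambda_1\midbar\blank)\hookrightarrow\sGP(\Lambda_2\midbar\blank)$ is compatible with these straight-line homotopies (the embedding of the larger lattice restricts to that of the smaller one on the $\boxtimes$-summands), the collection $\{\eta_{i_0,\dots,i_n}\}$ assembles automatically; one does not need to make independent choices at each simplex. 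I would therefore organize the proof so that \emph{all} structural homotopies are pulled back from a single "universal affine interpolation" living on the space of pairs (virtual lattice, embedding into $\bR^{d+\infty}$), which is functorial in $\Lambda$ by construction, thereby getting the $NI$-coherence for free; verifying this functoriality carefully is the technical heart of the argument, but it is a bookkeeping matter rather than a genuinely hard analytic point, since \cref{rmk:polynomial.coarse.invariance} already isolates the only place where the analysis enters.
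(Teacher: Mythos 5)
Your overall plan---reduce to compact $\sM$, identify the discrepancy between bracketings, produce coherence homotopies, then argue for weak strictness---is the right shape, and you correctly isolate that the only thing changing between $(\sfH_1\boxtimes\sfH_2)\boxtimes\sfH_3$ and $\sfH_1\boxtimes(\sfH_2\boxtimes\sfH_3)$ is the relabeling $\fm$ on the $\bN$-component. But the central mechanism you propose for producing the homotopy does not work, and the gap is fatal.

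Read \eqref{eqn:composition.lattice} again: $\fm$ acts only on the $\bN$-factor of $\bR^{d+\infty}\times\fR\times\bN$, and $\id_{\bR^\infty}\times\id_{\fR}$ leaves the Euclidean and $\fR$-coordinates untouched. Therefore the two bracketed lattices have \emph{identical} $\bR^{d+\infty}$-coordinates. There are no ``two embeddings of the virtual lattice into $\bR^{d+\infty}$'' to interpolate between: your proposed affine isotopy is the constant path. And you cannot salvage this by interpolating in the $\bN$-direction, because (i) $\bN$ is discrete and does not enter the pseudo-metric on $\Lambda$ at all (only $\pr_{\bR^{d+l_\Lambda}}$ does, by \cref{defn:lattice.set}(ii)), so there is nothing to affinely interpolate; and (ii) even if you artificially embedded $\bN\hookrightarrow\bR$ as a spatial coordinate, a time-varying lattice is not a section of $\sGP(\Lambda\midbar\sM\times[0,1])$ for any single $\Lambda\in\fL_d$---the framework defines a smooth family over a \emph{fixed} lattice, with the local generators $\sfH_{\bm x}$ indexed by the static set $\Lambda$. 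A ``moving lattice'' lies outside the sheaf you are supposed to produce a path in, and this cannot be finessed by invoking \cref{rmk:polynomial.coarse.invariance}, which only identifies the Fréchet \emph{topology} under a fixed bi-Lipschitz change of metric; it produces no path.

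What the paper does instead is keep the lattice completely fixed (take $\Lambda''$ large enough to contain both bracketings and extra copies) and move the \emph{Hamiltonian} via a homotopy of locally generated automorphisms in $\cG^{\sfh}_{\Lambda''}$. The key analytic inputs are \cref{lem:flip.homotopy}, which constructs an on-site homotopy $\widetilde\flip{}^s$ from $\id$ to $\flip_{12}\otimes\flip_{34}$ fixing $\sfh$, using the fact that $(1\,2)$ and $(3\,4)$ are conjugate in $\fS_4$ and hence connected in $O(\bR[\fS_4])$, and \cref{lem:Ainfty}, which uses this flip to show $\alpha\mapsto i_\Lambda\circ\alpha$ is homotopic to the constant morphism after tensoring with three trivial copies. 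The coherence over $NI$ is then built by induction on simplex dimension, with \cref{lem:Ainfty} supplying the extension at each stage. This keeps every step inside the framework of smooth families on a fixed lattice, never touching the embedding into $\bR^{d+\infty}$. Note also that your final paragraph appeals to the contractibility of a ``space of linearly proper embeddings'' as though it were established; it is not, and the machinery in the paper that does handle homotopies of embeddings (\cref{cor:coarse.homotopy.Hamiltonian}, via the Kitaev pump) is developed \emph{after} and \emph{using} \cref{lem:H-monoid}, so appealing to it here would be circular.

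A minor additional slip: it is not true that ``$\boxtimes\sfh_\emptyset$ is the identity on the nose.'' Even $\Lambda\boxtimes\emptyset=(\id\times\id\times\fm)(\Lambda\sqcup\emptyset)$ has its $\bN$-labels shifted by $\fm$, so the unit condition is itself a genuine (non-constant) homotopy, handled by the same flip mechanism; your argument for weak strictness, which depended on the $\mathrm{unit}$-pieces being constant, therefore also needs repair.
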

The proof given below may appear somewhat indirect. However, this is intended with the later generalization to the fermionic and the $(G,\phi)$-symmetric versions in mind.

\begin{lem}\label{lem:flip.homotopy}
    There is a smooth homotopy $\widetilde{\flip} \colon [0,1] \to \Aut (\cA_{\Lambda}^{\otimes 4})$ of local automorphisms such that $\widetilde{\flip}{}^s(\sfh) = \sfh$ for any $s \in [0,1]$ and $\widetilde{\flip}{}^0 =\id$, $\widetilde{\flip}{}^1 =  \flip_{12} \otimes \flip_{34}$ hold. 
\end{lem}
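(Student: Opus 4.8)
The plan is to realize $\widetilde{\flip}$ as the locally generated (LG) automorphism attached to an explicit \emph{on-site} connection $1$-form whose local generators have operator norm at most $\pi$ uniformly in the lattice point; with such a generator all the quasi-locality hypotheses of \cref{section:analysis.spin} are satisfied trivially, and the whole difficulty is reduced to a finite-dimensional linear algebra statement on a single site.

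First I would introduce, for each $\bm{x}\in\Lambda$, the self-adjoint unitary $\sigma_{\bm{x}}\in\cB(\sH_{\bm{x}}^{\otimes 4})$ determined by $\sigma_{\bm{x}}(v_1\otimes v_2\otimes v_3\otimes v_4)=v_2\otimes v_1\otimes v_4\otimes v_3$. Under the natural identification $\cA_{\Lambda}^{\otimes 4}\cong\bigotimes_{\bm{x}\in\Lambda}\cA_{\bm{x}}^{\otimes 4}$, conjugation by $\sigma_{\bm{x}}$ on the factor $\cA_{\bm{x}}^{\otimes 4}$ implements the restriction of $\flip_{12}\otimes\flip_{34}$ to the site $\bm{x}$, so that $\flip_{12}\otimes\flip_{34}=\bigotimes_{\bm{x}}\Ad(\sigma_{\bm{x}})$. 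Since the swap fixes $\Omega_{\bm{x}}\otimes\Omega_{\bm{x}}$, the unitary $\sigma_{\bm{x}}$ fixes $\Omega_{\bm{x}}^{\otimes 4}$; and since $\Ad(\sigma_{\bm{x}})$ merely permutes the four summands of the on-site generator $\sfh_{\bm{x}}^{\boxtimes 4}\coloneqq\sum_{j=1}^{4}1^{\otimes(j-1)}\otimes\sfh_{\bm{x}}\otimes 1^{\otimes(4-j)}$ of the trivial Hamiltonian $\sfh$ on $\cA_{\Lambda}^{\otimes 4}$, it commutes with $\sfh_{\bm{x}}^{\boxtimes 4}$. Let $P_{\bm{x}}^{-}\coloneqq\tfrac{1}{2}(1-\sigma_{\bm{x}})$ be the $(-1)$-eigenprojection of $\sigma_{\bm{x}}$ and set $g_{\bm{x}}\coloneqq i\pi P_{\bm{x}}^{-}$. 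Then $g_{\bm{x}}$ is skew-adjoint with $\|g_{\bm{x}}\|\le\pi$, one has $e^{g_{\bm{x}}}=1-2P_{\bm{x}}^{-}=\sigma_{\bm{x}}$, and, being a polynomial in $\sigma_{\bm{x}}$, $g_{\bm{x}}$ commutes with $\sfh_{\bm{x}}^{\boxtimes 4}$ and annihilates $\Omega_{\bm{x}}^{\otimes 4}$.

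Next I would assemble $\sfG\coloneqq(g_{\bm{x}})_{\bm{x}\in\Lambda}$, viewing $\cA_{\Lambda}^{\otimes 4}$ as the observable algebra of $\Lambda$ with on-site algebras $\cA_{\bm{x}}^{\otimes 4}$. As each $g_{\bm{x}}$ is strictly supported at $\bm{x}$, we have $g_{\bm{x}}=\Pi_{\bm{x},r}(g_{\bm{x}})$ for every $r>0$, so $\vvert\sfG\vvert_{f}=\sup_{\bm{x}}\|g_{\bm{x}}\|_{\bm{x},f}=\sup_{\bm{x}}\|g_{\bm{x}}\|\le\pi$ for all $f\in\cF$; thus $\sfG\,ds\in\Omega^{1}([0,1]_{s},\fDer^{\al}_{\Lambda})$, which is constant in $s$ and hence smooth. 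I would then set $\widetilde{\flip}{}^{s}\coloneqq\alpha(\sfG\,ds\,;s)$, a smooth path of LG automorphisms in the sense of \cref{defn:parallel.transport}. Because the $g_{\bm{x}}$ are on-site and pairwise commuting, this is nothing but $\widetilde{\flip}{}^{s}=\bigotimes_{\bm{x}}\Ad(e^{sg_{\bm{x}}})$, acting on an operator supported in a finite region $Z\subset\Lambda$ by conjugation with $\prod_{\bm{x}\in Z}e^{sg_{\bm{x}}}$. Evaluating at the endpoints gives $\widetilde{\flip}{}^{0}=\id$ and, since $e^{g_{\bm{x}}}=\sigma_{\bm{x}}$, $\widetilde{\flip}{}^{1}=\bigotimes_{\bm{x}}\Ad(\sigma_{\bm{x}})=\flip_{12}\otimes\flip_{34}$.

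Finally I would check that, for each $s$ and each $\bm{y}$, the automorphism $\Ad(e^{sg_{\bm{x}}})$ fixes $\sfh_{\bm{y}}^{\boxtimes 4}$ — for $\bm{y}=\bm{x}$ since $[g_{\bm{x}},\sfh_{\bm{x}}^{\boxtimes 4}]=0$, and for $\bm{y}\neq\bm{x}$ by disjointness of supports — and fixes the product vector state $\omega_{0}$, because $e^{sg_{\bm{x}}}\Omega_{\bm{x}}^{\otimes 4}=\Omega_{\bm{x}}^{\otimes 4}$; hence $\widetilde{\flip}{}^{s}(\sfh)=\sfh$ as a gapped UAL Hamiltonian. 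The one point that needs care — and the reason the obvious straight-line path of unitaries does not do the job — is the bound $\|g_{\bm{x}}\|\le\pi$ uniform in $\dim\sH_{\bm{x}}$, which is why the spectral form $i\pi P_{\bm{x}}^{-}$ is used; it is also why the statement is phrased for the fourfold tensor product, so that the identical on-site recipe applies verbatim in the fermionic and $(G,\phi)$-symmetric settings, where a single swap need not lie in the relevant connected component, as anticipated in the remark preceding the statement.
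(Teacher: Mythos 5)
Your construction is correct for the lemma exactly as stated, and it is more explicit than what the paper does. The paper instead produces a smooth path in the orthogonal group of the real group $\ast$-algebra $\bR[\fS_4]$ from the identity to $(1\,2)(3\,4)$: the two transpositions $(1\,2)$ and $(3\,4)$ are conjugate, hence have equal determinant on every irreducible representation, so their product lies in $SO$ and is connected to $1$ inside $O(\bR[\fS_4])$; pushing such a path through the on-site permutation representation $\phi\colon\bR[\fS_4]\to\cA_\lambda^{\otimes 4}$ and conjugating gives $\widetilde{\flip}{}^s$, with the generator bounded by the fixed finite-dimensional norm of $\bR[\fS_4]$ and with every $\phi(\bR[\fS_4])$ commuting with $\sfh_\lambda'=\sfh_\lambda^{\boxtimes 4}$. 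Your spectral exponential $e^{sg_{\bm{x}}}$ with $g_{\bm{x}}=i\pi P^-_{\bm{x}}$ achieves the same norm control by a different device, and for the bare statement both routes go through.

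Your closing sentence, however, is wrong on both counts, and the error is precisely what the paper's indirect-looking argument is designed to avoid. First, the claim that the ``identical on-site recipe applies verbatim in the $(G,\phi)$-symmetric setting'' fails when $\phi$ is non-trivial: for $0<s<1$ one has
\begin{align*}
e^{sg_{\bm{x}}}=\tfrac{1+e^{is\pi}}{2}\cdot 1+\tfrac{1-e^{is\pi}}{2}\cdot\sigma_{\bm{x}},
\end{align*}
a genuinely complex combination, so an antilinear on-site symmetry $\beta_g$ with $\phi(g)=1$ sends $e^{sg_{\bm{x}}}$ to $e^{-sg_{\bm{x}}}=(e^{sg_{\bm{x}}})^*$, whence $\beta_g\circ\Ad(e^{sg_{\bm{x}}})\circ\beta_g^{-1}=\Ad(e^{sg_{\bm{x}}})^{-1}\neq\Ad(e^{sg_{\bm{x}}})$. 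Your path therefore leaves the $G$-equivariant subsheaf for intermediate $s$, which breaks the $(G,\phi)$-equivariant version of \cref{lem:Ainfty} and hence of \cref{lem:H-monoid}. The paper's homotopy lives in $O(\bR[\fS_4])$ exactly so that it is fixed by complex conjugation and commutes with antilinear symmetries. Second, your rationale for the fourfold tensor product is reversed: the spectral exponential $e^{i\pi P^-}$ works for a single swap $\flip$ on $\cA_\lambda^{\otimes 2}$ just as well, so nothing in your construction forces four factors. The fourfold structure is needed on the paper's route, because $(1\,2)$ has determinant $-1$ on some irreducible representations of $\fS_4$ and therefore is not connected to the identity in $O(\bR[\fS_4])$, while the product $(1\,2)(3\,4)$ is. Replacing your $i\pi P^-_{\bm{x}}$ by a real skew-symmetric generator inside $\bR[\fS_4]$ (which exists by the paper's connectedness argument) would close the gap and make the two proofs converge.
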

\begin{proof}
    First, we remark that the transpositions $(1 \, 2)$ and $(3 \, 4)$ are homotopic in the unitary (orthogonal) group of the real $\ast$-algebra $\bR[\fS_4]$. 
    Indeed, since transpositions are mutually conjugate, they have the same determinant on each irreducible representation $\sV$ of $\fS_4$. This means that they are homotopic in the orthogonal group $O(\sV)$, and hence in $O(\bR[\fS_4])$ by the Peter--Weyl theorem. 
    
    Let $\sfh_{\lambda}' \coloneqq \sfh_{\lambda} \boxtimes \sfh_{\lambda} \boxtimes \sfh_{\lambda} \boxtimes \sfh_{\lambda} \in \cA_{\lambda}^{\boxtimes 4}$. 
    The permutation gives a unitary representation of the symmetric group $v \colon \fS_4 \to \sH_{\lambda}^{\otimes 4}$. In particular, the flip automorphisms are realized by this representation as $\flip_{12} \otimes \flip_{34} = \Ad (v_{(1 \, 2) (3 \, 4)})$. 
    Moreover, this $v$ gives rise to a $\ast$-representation $\phi \colon \bR [\fS_4] \to \cB(\sH_{\lambda}^{\otimes 4}) = \cA_{\lambda}^{\otimes 4}$ such that any element of $\phi(\bR[\fS_4])$ commutes with $\sfh_{\lambda}'$. By the argument in the above paragraph, this proves the lemma. 
\end{proof}

\begin{lem}\label{lem:Ainfty}
    Let us regard the group $\cG_{\Lambda}$, and its subgroup $\cG^{\sfh}_{\Lambda}$ consisting of automorphisms that fix the trivial Hamiltonian, as sheaves on $\Man $ via the smoothness defined in \cref{defn:parallel.transport} (4). 
    The morphism of sheaves 
    \[
    i_{\Lambda} \circ \blank \colon \cG_{\Lambda}^{\sfh} \to \sHom (\sGP(\Lambda \midbar \blank ), \sGP_d )
    \]
    given by $\alpha  \mapsto i_{\Lambda} \circ \alpha$ is smoothly homotopic to the constant morphism $\alpha \mapsto i_{\Lambda}$. 
\end{lem}
\begin{proof}
    Since $\pr_{\bN}(\Lambda) \subset [0,n_{\Lambda}]$, we can take $3$ copies $\Lambda_k \subset \pr_{\bN}^{-1}[kn_{\Lambda}, (k+ 1)n_{\Lambda}]$ (for $k=1,2,3$) of $\Lambda$ that are mutually disjoint. The homotopy of morphisms 
    \[
    \cG_\Lambda^{\sfh} \to \sHom (\sGP(\Lambda \midbar \blank ), \sGP (\Lambda \sqcup \Lambda_1 \sqcup \Lambda_2 \sqcup \Lambda_3 \midbar \blank ) )
    \]
    given by
    \[
    \alpha \mapsto  \big( \sfH \mapsto \widetilde{\flip}{}^s \circ (\alpha \otimes \id \otimes \id \otimes \id) \circ \widetilde{\flip}{}^s (\sfH \boxtimes \sfh \boxtimes \sfh \boxtimes \sfh )\big) 
    \]
    connects $i_{\Lambda} \circ \blank $ and the constant morphism  $\alpha \mapsto i_{\Lambda} $. It is a morphism of $\sSet_*$-valued sheaves since this homotopy fixes the trivial Hamiltonian $\sfh$. 
\end{proof}

\begin{proof}[Proof of \cref{lem:H-monoid}]
    Here, we only prove the local homotopy associativity. The local homotopy unit condition and the local homotopy commutativity are proved similarly. 
    We construct the coherent homotopies $\eta_{\Lambda_0,\cdots,\Lambda_n}$ by induction on the dimension of cells. 
    Suppose that, for $0 \leq k <n$ and any increasing sequence $\Lambda_0 \leq \cdots \leq \Lambda_k$, there is another lattice $U(\Lambda_0,\cdots,\Lambda_k) \in \fL_{\fR,d}$ that contains $(\Lambda_0 \boxtimes \Lambda_0) \boxtimes \Lambda_0 \cup \Lambda_0 \boxtimes (\Lambda_0 \boxtimes \Lambda_0)$ and a smooth map  
    \[
    \eta_{\Lambda_{0},\cdots, \Lambda_{k}} \colon \Delta_k^e \to \cP \cG^{\sfh}_{U(\Lambda_0,\cdots,\Lambda_k)}
    \]
    for any $0 \leq k \leq n-1$ such that $\eta_{\Lambda_{f(0)} , \cdots , \Lambda_{f(l)}} = f^* \eta_{\Lambda_0,\cdots,\Lambda_k}$ for any $f \in [l] \to [k]$ and, on $\sGP(\Lambda_{i_0} \midbar \blank )^3 $, they satisfy the relations 
    \[
    \ev_0 (\eta_{\Lambda_0, \cdots, \Lambda_k})  \circ \mu (\mu \times \iota)  = \mu (\mu \times \iota), \quad  \ev_1 (\eta_{\Lambda_{i_0},\cdots, \Lambda_{i_k}})  \circ \mu (\mu \times \iota)  = \mu (\iota \times \mu ),
    \]
    where the evaluations $\ev_i$ are the ones with respect to the path variable on $\cP \cG^{\sfh}_{U_{\Lambda_0,\cdots,\Lambda_k}}$. Then, by \cref{lem:Ainfty}, they extend to a consistent family of $n$-homotopies $\eta_{\Lambda_0,\cdots,\Lambda_n}$ by taking $U(\Lambda_0,\cdots,\Lambda_n)$ as the disjoint union of $\bigcup_{k=0}^n U(\Lambda_0,\cdots, \hat{\Lambda}_k,\cdots, \Lambda_n)$ and $3$ copies of it. 
    Now, $\eta_{\Lambda_0,\cdots , \Lambda_n} \circ \mu(\mu \times \iota)$ is the desired family.
\end{proof}

\begin{rmk}\label{rmk:Ainfty}
    The same argument as the proof of \cref{lem:H-monoid}, we would get arbitrarily higher coherence homotopies representing the $A_{\infty}$-relation. In simpler terms, we need not care for the subtlety of the order of the product of lattices, e.g.,\ $(\Lambda_1 \boxtimes \Lambda_2) \boxtimes (\Lambda_3 \boxtimes \Lambda_4)$ and $\Lambda_1 \boxtimes ((\Lambda_2 \boxtimes \Lambda_3) \boxtimes \Lambda_4)$.
\end{rmk}

\begin{defn}\label{defn:gapped.invertible}
    Let $\sfH \in \sGP_d(\sM)$. We say that $\check{\sfH} \in \sGP(\check{\Lambda} \midbar \sM )$ is a homotopy inverse system of $\sfH$ if there is and a smooth homotopy $\overline{\sfH} \in \cP\sGP_d(\sM )$ of $\sfH \boxtimes \check{\sfH}$ and $\sfh$. 
    We say that $\sfH$ is \emph{invertible} if, for any $p \in \sM$, there is an open neighborhood $p \in \sU$ such that $\sfH|_{\sU}$ has an inverse system. 
    We write $\sIP_d (\sM )$ for the subset of invertible gapped uniformly almost local (IG UAL) Hamiltonians. By definition, $\sIP_d$ forms a sheaf on $\Man$. 
\end{defn}

In terms of \cref{subsubsection:Hmonoid}, $\sIP_d$ is nothing but $\sGP_d^\times$. 
By \cref{lem:Hmonoid.invertible} and \cref{lem:H-monoid}, the operation $\boxtimes$ makes the sets $\sIP_d[\sM]$ and $\sIP_d[\sM,\sA]$ abelian groups. 
Moreover, the following sheaf is weakly equivalent to its refinement $\bsIP_d \coloneqq \boldsymbol{\mathscr{GP}}_d^\times$ in the sense of \cref{lem:bold.invertible.sheaf} for $\sF=\sGP_d$. Here, we repeat the definition. 
\begin{defn}\label{eqn:IP.bold}
We define the refinement sheaf $\bsIP_d$ of $\sIP_d$ as
\begin{align*}
    \bsIP{}_d(\sM)\coloneqq  \{ \bsfH=(\sfH,\check{\sfH}, \overline{\sfH} ) \in  \sIP_d(\sM )^2  \times \cP\sGP_d(\sM) \mid  \text{$\ev_0\overline{\sfH} =\sfH \boxtimes \check{\sfH}$, $\ev_1 \overline{\sfH} = \sfh$} \}.
\end{align*}    
For a relatively compact open subset $\sU \subset \sM$, there is $\Lambda \in \fL_d$ such that $\sfH, \check{\sfH} \in \sGP(\Lambda \midbar \sU) $ and $\overline{\sfH} \in \sGP(\Lambda \boxtimes \Lambda \midbar \sU)$. We call the smallest $\Lambda$ with this property the support of $\bsfH |_{\sU}$. 
\end{defn}
The following lemma follows from \cref{lem:Hmonoid.invertible}.
\begin{lem}\label{lem:forget.invertible}
The forgetful map $\mathsf{fg} \colon \bsIP{}_d (\sM) \to \sIP_d (\sM)$ given by $\mathsf{fg}(\sfH, \check{\sfH}, \overline{\sfH}) \coloneqq  \sfH$ is a weak equivalence of sheaves. 
\end{lem}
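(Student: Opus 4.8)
The plan is to deduce the lemma as a direct instance of the general equivalence \cref{lem:bold.invertible.sheaf}, taking $\sF = \sGP_d$. By \cref{lem:H-monoid}, the pair $(\sGP_d,\boxtimes)$ is a local commutative H-monoid whose unit, the trivial Hamiltonian $\sfh$, is weakly strict, so the hypotheses of \cref{lem:bold.invertible.sheaf} are satisfied; and by \cref{defn:gapped.invertible} one has $\sGP_d^\times = \sIP_d$. Thus \cref{lem:bold.invertible.sheaf} supplies the refinement sheaf $\boldsymbol{\mathscr{GP}}_d^\times$ together with a forgetful morphism to $\sIP_d$ that is already known to be a weak equivalence, and all that remains is to identify $\bsIP_d$, as defined in \cref{eqn:IP.bold}, with $\boldsymbol{\mathscr{GP}}_d^\times$; under this identification the map $\mathsf{fg}$ of the statement is exactly the forgetful morphism of \cref{lem:bold.invertible.sheaf}.

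The only point in this identification that is not a tautology is a mismatch between the two definitions: \cref{eqn:IP.bold} requires $\overline{\sfH}$ merely to be a path in $\sGP_d$ from $\sfH \boxtimes \check{\sfH}$ to $\sfh$, while the definition in \cref{lem:bold.invertible.sheaf} asks for a path in $\sGP_d^\times = \sIP_d$, i.e.\ one whose every slice $\overline{\sfH}(p,t)$ is invertible. I would bridge this with \cref{lem:Hmonoid.invertible}(1) applied over the manifold $\sM \times [0,1]$: viewing $\overline{\sfH}$ as a section of $\sGP_d$ over $\sM \times [0,1]$, a standard reparametrisation shows it is smoothly homotopic to the pullback $\pr_{\sM}^*(\sfH \boxtimes \check{\sfH})$; since $\sfH$ has $\check{\sfH}$ as an inverse system, $\sfH \boxtimes \check{\sfH}$ lies in the trivial smooth homotopy class, hence is invertible, so $[\overline{\sfH}] \in \sGP_d[\sM \times [0,1]]$ is invertible, whence \cref{lem:Hmonoid.invertible}(1) forces each $\overline{\sfH}(p,t)$ to be invertible. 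Therefore $\overline{\sfH} \in \cP\sIP_d(\sM)$, and $\bsIP_d = \boldsymbol{\mathscr{GP}}_d^\times$ as sheaves.

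Given this identification, the lemma follows at once from \cref{lem:bold.invertible.sheaf}. I do not expect any genuine obstacle here: the substantive content --- that $\mathsf{fg}$ is a fibration with smoothly contractible fibres, the contractibility being proved via the explicit coherence diagram \eqref{eqn:diagram.homotopy.equivalence.invertible} in which the weak strictness of the unit $\sfh$ is used --- is carried out once and for all inside the proof of \cref{lem:bold.invertible.sheaf}. The present statement is thus purely a matter of matching definitions, with \cref{lem:Hmonoid.invertible} providing the sole nontrivial comparison.
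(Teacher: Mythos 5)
Your proposal is correct and follows the route the paper intends: the lemma is an instance of \cref{lem:bold.invertible.sheaf} applied to $\sF = \sGP_d$, once $\bsIP_d$ as defined in \cref{eqn:IP.bold} is identified with $\boldsymbol{\sGP}_d^\times$. Your observation that the two definitions differ on the ambient sheaf for $\overline{\sfH}$ (the paper's \cref{eqn:IP.bold} requires only $\overline{\sfH}\in\cP\sGP_d(\sM)$, while the refinement in \cref{lem:bold.invertible.sheaf} asks for $\cP\sF^\times(\sM)$) is genuine, and your resolution via \cref{lem:Hmonoid.invertible}(1) — exploiting that $\overline{\sfH}$ is smoothly homotopic to $\pr_\sM^*(\sfH\boxtimes\check{\sfH})$, whose class is invertible by hypothesis — is exactly the pointwise-invertibility argument the paper's terse citation of \cref{lem:Hmonoid.invertible} is gesturing at. Nothing is missing; you have simply made explicit what the paper compresses to a single sentence.
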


\begin{rmk}
    The sheaf $\bsIP{}_d$ is also endowed with the structure of a local commutative H-monoid via the product 
    \begin{align*} 
    (\sfH_1 , \check{\sfH}_1, \overline{\sfH}_1) \boxtimes (\sfH_2 , \check{\sfH}_2, \overline{\sfH}_2) \coloneqq (\sfH_1 \boxtimes \sfH_2, \check{\sfH}_1 \boxtimes \check{\sfH}_2, \flip_{23}(\overline{\sfH}_1 \boxtimes \overline{\sfH}_2) ) \in \bsIP_d (\sM), 
    \end{align*}
    where $\flip_{23}$ denotes the flip of the second and the third tensor components of $\cA_{\Lambda}^{\otimes 4}$. Strictly speaking, the above definition is insufficient in that it does not specify the order of the three products in $\sfH_1 \boxtimes \sfH_2 \boxtimes \check{\sfH}_1 \boxtimes \check{\sfH}_2$, which corresponds to different ways of identifying the disjoint union $\Lambda^{\sqcup 4}$ with an element of $\fL_d$ by the multiple use of $\fm$. However, by \cref{rmk:Ainfty}, we need not care for this subtlety. 
\end{rmk}

\begin{defn}
The topological space $ \IP_d $ is defined to be the realization $|\Sing (\bsIP_d) |$ of the sheaf $\bsIP_d$ (cf.~\cref{thm:MW}). 
\end{defn}

The inclusion $\sGP(\Lambda_1 \midbar \blank) \subset \sGP(\Lambda_2 \midbar \blank)$ for $\Lambda_1 \leq \Lambda_2$ is generalized to the covariant functoriality with respect to a map of lattices. 
Let $\Lambda_1, \Lambda_2 \in \fL_d$. We say that $F \colon \Lambda_1 \to \Lambda_2$ is a map over $\fR$ if $\lambda \circ F = \lambda$. 
Let $F \colon \Lambda_1 \to \Lambda_2$ be an injective linearly proper large-scale Lipschitz map over $\fR$. 
Then it induces the unital $\ast$-homomorphism of C*-algebras
    \begin{align*} 
    F_* \coloneqq \bigotimes_{\bm{x} \in \Lambda_1} (\cA_{\lambda(\bm{x})} \xrightarrow{\cong} \cA_{\lambda(F(\bm{x}))}) \colon \cA_{\Lambda_1} \hookrightarrow \cA_{\Lambda_2}. 
    \end{align*}

\begin{lem}\label{lem:induced.map.GP}
    The above $F_*$ induces the continuous homomorphism of almost local algebras $F_* \colon \cA_{\Lambda_1}^{\al} \to \cA_{\Lambda_2}^{\al}$. Moreover, it also induces the morphisms $F_* \colon \fDer ^{\al}_{\Lambda_1}  \to \fDer ^{\al}_{\Lambda_2}$ and $F_* \colon \fH_{\Lambda_1}^{\al} \to \fH_{\Lambda_2}^{\al}$ explicitly given by 
    \begin{align*}
        (F_*\sfG)_{\bm{x}} \coloneqq 
        \begin{cases}
            F_*(\sfG_{F^{-1}(\bm{x})}) & \text{ if $\bm{x} \in F(\Lambda_1)$},\\
            0 & \text{ otherwise, }
        \end{cases}
        \quad 
        (F_*\sfH)_{\bm{x}} \coloneqq 
        \begin{cases}
            F_*(\sfH_{F^{-1}(\bm{x})}) & \text{ if $\bm{x} \in F(\Lambda_1)$},\\
            \sfh_{\bm{x}} & \text{ otherwise, }
        \end{cases}
    \end{align*}
    and 
    \begin{align*}
        \omega_{F_*\sfH} \coloneqq (\omega_{\sfH} \circ F^{-1}_*)|_{\cA_{F(\Lambda_1)}}\otimes \omega_0|_{\cA_{F(\Lambda_1)^c}} 
    \end{align*}
    for $\sfG \in \fDer ^{\al}_{\Lambda_1}$ and $\sfH \in \fH_{\Lambda_1}^{\al}$. 
    These morphisms preserve the smoothness of families parametrized by manifolds. 
    In particular, the above definitions give a morphism of sheaves $F_* \colon \sGP(\Lambda_1 \midbar \blank ) \to \sGP(\Lambda_2 \midbar \blank ) $.
\end{lem}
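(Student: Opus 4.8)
The plan is to verify each claimed property of the induced map in turn, reducing everything to the large-scale Lipschitz functoriality already established in \cref{rmk:polynomial.coarse.invariance} together with the locality estimates of \cref{subsection:quantum.spin.system}. First I would handle the continuity of $F_* \colon \cA_{\Lambda_1}^{\al} \to \cA_{\Lambda_2}^{\al}$. Since $F$ is injective, linearly proper, and large-scale Lipschitz over $\fR$, the C*-homomorphism $F_*$ is the one built from the family of $*$-isomorphisms $\phi_{\bm{x}} \colon \cA_{\lambda(\bm{x})} \xrightarrow{\cong} \cA_{\lambda(F(\bm{x}))}$, which is precisely the situation covered by the functoriality statement in \cref{rmk:polynomial.coarse.invariance}. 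There it is shown, via \eqref{eqn:large.scale.Lipschitz.F}, that $\bigotimes_{\bm{x}}\phi_{\bm{x}}$ sends $\cA_{\Lambda_1}^{\al}$ into $\cA_{\Lambda_2}^{\al}$; the quantitative bound needed for Fr\'{e}chet continuity is $\| F_* a \|_{F(\bm{x}),\nu,\mu}\leq c_{\nu,\mu,A} \cdot \| a\|_{\bm{x},\nu,\mu_1}$ for the large-scale Lipschitz constant $A$ of $F$, which follows by comparing $\Pi_{F(\bm{x}),r}(F_*a)$ with $F_*(\Pi_{\bm{x},A^{-1}r-A^{-1}B}(a))$ using that $F(B_{A^{-1}r-A^{-1}B}(\bm{x}))\subset B_r(F(\bm{x}))$. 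The linear properness is needed to ensure that $F(\Lambda_1)\in\fL_d$, i.e.\ that $\pr_{\bR^d}\colon F(\Lambda_1)\to\bR^d$ is still linearly proper, which is immediate since the composition of linearly proper maps is linearly proper (\cref{defn:polynomially.proper}).

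Next I would treat the map on UAL derivations and Hamiltonians. Given the formulas for $(F_*\sfG)_{\bm{x}}$ and $(F_*\sfH)_{\bm{x}}$, the seminorm estimate $\vvert F_*\sfG\vvert_{\nu,\mu}\leq c_{\nu,\mu,A}\cdot\vvert\sfG\vvert_{\nu,\mu_1}$ is just the supremum over $\bm{x}$ of the pointwise estimate from the previous paragraph (noting that at points outside $F(\Lambda_1)$ the local generator is $0$ and the seminorm vanishes, and similarly for $\sfH$ the local term is $\sfh_{\bm{x}}$, which has finite almost-local norm and is a genuine gapped term). For $F_*\sfH\in\fH_{\Lambda_2}^{\al}$ I must check it is a gapped UAL Hamiltonian with a non-degenerate ground state of gap $1$. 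This follows from the product structure: $F_*\sfH$ splits as the Hamiltonian $\sfH$ (transported to $\cA_{F(\Lambda_1)}$ via the $*$-isomorphism $F_*|_{\cA_{\Lambda_1}}$) boxtimes the trivial Hamiltonian $\sfh$ on $\cA_{F(\Lambda_1)^c}$. Since $F_*|_{\cA_{\Lambda_1}}\colon\cA_{\Lambda_1}\to\cA_{F(\Lambda_1)}$ is a $*$-isomorphism intertwining $\tau_{\sfH}$ with $\tau_{F_*\sfH}|_{\cA_{F(\Lambda_1)}}$ (here I would quote \cref{prp:Ffunction} and the Lieb--Robinson machinery to see the dynamics transport correctly, but really it is a formal consequence of $F_*$ being an isometric $*$-isomorphism onto its image compatible with the local generators), the GNS Hamiltonian of $\omega_{F_*\sfH}=\big((\omega_\sfH\circ F_*^{-1})|_{\cA_{F(\Lambda_1)}}\big)\otimes\omega_0|_{\cA_{F(\Lambda_1)^c}}$ decomposes as $H_{\omega_\sfH}\otimes 1+1\otimes H_{\omega_0}$, exactly as in the composite-system computation \eqref{eqn:Hamiltonian.product}. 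Since both factors are gapped and non-degenerate with gap $1$ (the trivial one trivially so), so is $F_*\sfH$.

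Finally I would check preservation of smoothness, i.e.\ that if $\sfH\colon\sM\to\fH_{\Lambda_1}^{\al}$ satisfies the three conditions of \cref{defn:smooth} then so does $F_*\sfH\colon\sM\to\fH_{\Lambda_2}^{\al}$. For conditions (1) and (2) this is the $C^k$-version of the pointwise seminorm bound above: $\vvert F_*\sfH\vvert_{\sU,C^k,\nu,\mu}\leq c_{\nu,\mu,A}\cdot\vvert\sfH\vvert_{\sU,C^k,\nu,\mu_1}$, differentiating under $F_*$ (which is a fixed bounded linear map). For condition (3) on the ground state, one uses $\omega_{F_*\sfH}(a)=\omega_\sfH\big((\id_{\cA_{F(\Lambda_1)^c}}\otimes\,\omega_0)(F_*^{-1}\text{-pullback of }a)\big)$ appropriately, so smoothness of $p\mapsto\omega_{\sfH(p)}$ with the bound \eqref{eqn:differentiable} passes through with the same $A$-dependent constant. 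Taken together these give a morphism of sheaves $F_*\colon\sGP(\Lambda_1\midbar\blank)\to\sGP(\Lambda_2\midbar\blank)$, compatible by construction with the inclusions $\sGP(\Lambda_i\midbar\blank)\subset\sGP(\Lambda_i'\midbar\blank)$ and hence descending to the colimit $\sGP_d$. The main obstacle I anticipate is purely bookkeeping: tracking how the large-scale Lipschitz constants $(A,B)$ of $F$ enter each $\cF_1$-seminorm estimate and confirming that \eqref{eqn:large.scale.Lipschitz.F} indeed covers the $C^k$-uniform version over a relatively compact chart; no genuinely new analytic input beyond \cref{subsection:coarse.lattice} and \cref{subsection:quantum.spin.system} should be required.
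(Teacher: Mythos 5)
Your proposal is correct and follows essentially the same argument as the paper: the seminorm estimate on $\|F_*(a)\|_{F(\bm{x}),\mu}$ obtained by comparing $\Pi_{F(\bm{x}),r}(F_*a)$ with $F_*\big(\Pi_{\bm{x},\alpha^{-1}r-\alpha^{-1}\beta}(a)\big)$ via \eqref{eqn:large.scale.Lipschitz.F}, the tensor-product decomposition of the GNS representation for the spectral gap, and the $C^k$-uniform version of the same bound for smoothness of families. The only small imprecision is that the constant in your bound should also depend on the additive Lipschitz constant $B$ (through a factor such as $f_{\mu_1}(\alpha^{-1}\beta)^{-1}$, as in \eqref{eqn:f-norm.push}), and one should record $F_*\sfh=\sfh$ to conclude that $F_*$ is a morphism of pointed sheaves.
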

\begin{proof}
    Let $\alpha , \beta >0$ be constants such that $\rmd(F(\bm{x}),F(\bm{y}) ) \leq \alpha \rmd(\bm{x},\bm{y}) +\beta$. 
    For any $a \in \cA_{\Lambda}^{\al}$, by \eqref{eqn:large.scale.Lipschitz.F} we have 
    \begin{align}
    \begin{split}
        \vvert F_*(a) \vvert _{F(\bm{x}),\mu} = {}&{}  \sup_{r>0} f_\mu(r)^{-1} \| F_*(a) - \Pi_{B_r(F(\bm{x}))} (F_*(a)) \| \\
        \leq {}&{}  \sup_{r>0} 2f_\mu(r)^{-1} \| F_*(a - \Pi_{B_{\alpha^{-1} r-\alpha^{-1}\beta }(\bm{x})} (a)) \| \\
        \leq {}&{} 2 \cdot \sup_{r>0} f_\mu(r)^{-1} \cdot f_{\mu_1} (\alpha^{-1} r - \alpha^{-1}\beta) \cdot  \|a \|_{\bm{x},\mu_1}\\
        \leq {}&{} 2c_{0,\mu,\alpha} \cdot f_{\mu_1}(\alpha^{-1}\beta)^{-1} \cdot \| a \|_{\bm{x},\mu_1}.
        \label{eqn:f-norm.push}
    \end{split}
    \end{align}
    
    This also proves that $F_*\sfG \in \fDer ^{\al}_{\Lambda_2}$. 
    Since the GNS representation of $(\cA_{\Lambda_2}, \omega_{F_*\sfH}, F_*\sfH )$ is isomorphic to the tensor product of those of $(\cA_{\Lambda_1}, \omega_{\sfH}, \sfH)$ and $(\cA_{F(\Lambda_1)^c}, \omega_0,\sfh)$, it is gapped. 

    For the same reason, for a smooth family $\sfG \colon \sM \to \fDer ^{\al}_{\Lambda}$, its almost local $C^k$-norms are all bounded above on each relatively compact open chert. 
    Moreover, for a smooth family $\sfH$ of gapped UAL Hamiltonians, there is $0<\mu<1$ such that 
    \[ 
    |\partial^I \omega_{F_*\sfH(p)}(a)| \leq C \cdot \| F_*(a) \|_{\bm{x},\mu} \leq  C \cdot 2c_{0,\mu,\alpha} \cdot f_{\mu_1}(\alpha^{-1}\beta)^{-1} \cdot \| a \|_{\bm{x}, \mu_1}.
    \]
    Therefore, $F_*\sfH$ is also smooth. 
    Finally, since $F_*\sfh = \sfh$, the morphism $F_*$ is a morphism of $\Set_*$-valued sheaves. 
\end{proof}
The following lemma is proved in the same way as \cref{lem:H-monoid}. 
\begin{lem}\label{lem:flip.homotopy.use}
    Let $\Lambda_1, \Lambda_2 \in \fL_d$,
    and let $F_1 , F_2 \colon \Lambda_1 \to \Lambda_2$ be injective large-scale Lipschitz map over $\fR$. 
    Assume that $F_1$ and $F_2$ are near, i.e., $\sup_{\bm{x} \in \Lambda_1}\rmd(F_1(\bm{x}) , F_2(\bm{x})) < \infty$. 
    Then the induced morphisms $F_{1,*}, F_{2,*} \colon \sGP(\Lambda_1  \midbar \blank) \to \sGP(\Lambda_2 \midbar \blank) \to \sGP_{d}$ are homotopic. 
\end{lem}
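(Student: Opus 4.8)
\textbf{Proof proposal for \cref{lem:flip.homotopy.use}.}
The plan is to mimic the Eilenberg-swindle trick already used in the proof of \cref{lem:Ainfty} and \cref{lem:H-monoid}: we embed $\Lambda_2$ together with three disjoint copies of itself, insert trivial Hamiltonians on the extra copies, and use the homotopy $\widetilde{\flip}$ of \cref{lem:flip.homotopy} to slide the support from the first slot to another while interpolating between $F_1$ and $F_2$. Concretely, since $\pr_{\bN}(\Lambda_2)\subset [0,n_{\Lambda_2}]$, we can pick three mutually disjoint copies $\Lambda_2^{(k)}\subset \pr_{\bN}^{-1}[kn_{\Lambda_2},(k+1)n_{\Lambda_2}]$ of $\Lambda_2$ inside $\fL_d$ (relabelling via $\fm$), and work inside $\Lambda_2^{\sqcup 4}\cong \Lambda_2 \boxtimes \Lambda_2 \boxtimes \Lambda_2 \boxtimes \Lambda_2 =: \widehat\Lambda$, which lies in $\fL_d$ because $\pr_{\bR^d}$ stays linearly proper under finite disjoint unions over $\fR$.

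First I would construct a \emph{single} large-scale Lipschitz interpolation. Because $F_1$ and $F_2$ are near, the straight-line path $F_t(\bm x):=(1-t)\iota(F_1(\bm x))+t\,\iota(F_2(\bm x))$ in $\bR^{d+l}$ (for $l$ large enough to contain both images) is, for each fixed $t$, an injective large-scale Lipschitz map over $\fR$ onto a lattice $\Lambda_2(t)\in \fL_d$ with constants uniform in $t\in[0,1]$; here injectivity is not automatic but can be arranged after first passing to $\sGP(\Lambda_1\midbar\blank)\to\sGP(\widetilde\Lambda\midbar\blank)$ where $\widetilde\Lambda$ is a weakly uniformly discrete thickening of the whole track $\{(F_t(\bm x),t)\}$, applying \cref{lem:induced.map.GP}, and then restricting. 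This already gives a smooth homotopy of morphisms $\sGP(\Lambda_1\midbar\blank)\to \sGP(\widetilde\Lambda\midbar\blank)$ from $F_{1,*}$ composed with an inclusion to $F_{2,*}$ composed with an inclusion, because pushing forward along $F_t$ is smooth in $t$ (the bounds \eqref{eqn:f-norm.push} are uniform in $t$, and all $C^k$-seminorms are controlled by \cref{lem:induced.map.GP}). The remaining point is that the two inclusions $\sGP(\widetilde\Lambda\midbar\blank)\to\sGP_d$ induced by the two different embeddings $\widetilde\Lambda\hookrightarrow \bR^{d+\infty}\times\fR\times\bN$ (one coming from $\Lambda_2(0)=F_1(\Lambda_1)$-side, one from $\Lambda_2(1)=F_2(\Lambda_1)$-side) agree in $\sGP_d$ up to the standard homotopy — but in $\sGP_d$ all such "relabelling" inclusions are canonically identified, so this is just the colimit structure together with \cref{rmk:Ainfty}.

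The cleanest way to package this is: use \cref{lem:flip.homotopy} to produce, for each $s\in[0,1]$, the automorphism $\widetilde{\flip}{}^s$ of $\cA_{\widehat\Lambda}$ fixing $\sfh$, and define the homotopy of morphisms $\sGP(\Lambda_1\midbar\blank)\to\sGP(\widehat\Lambda\midbar\blank)$ by
\[
\sfH \;\longmapsto\; \widetilde{\flip}{}^{s}\circ\big((F_{1,*}\sfH)\boxtimes\sfh\boxtimes\sfh\boxtimes\sfh\big),\qquad s\in[0,1],
\]
which moves $F_{1,*}\sfH$ from the first to, say, the second slot, then in a second stage interpolate the embedding of that second slot from $F_1$ to $F_2$ by the straight-line homotopy $F_t$ above (now harmless, since it is sitting in a ``fresh'' copy), and finally use $\widetilde{\flip}$ again to bring it back to the first slot as $F_{2,*}\sfH$. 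Concatenating the three stages and post-composing with the inclusion $\sGP(\widehat\Lambda\midbar\blank)\to\sGP_d$ gives the desired homotopy between $F_{1,*}$ and $F_{2,*}$ as morphisms into $\sGP_d$; it is a morphism of $\Set_*$-valued sheaves throughout because every stage fixes $\sfh$.

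\textbf{Main obstacle.} The real work is not the swindle but verifying that the straight-line interpolation $F_t$ yields a \emph{smooth} family of morphisms in the sense of \cref{defn:smooth}, i.e.\ that the pushed-forward local generators and ground states have all almost-local $C^k$-seminorms bounded uniformly on compact parameter sets. The subtlety is that $F_t$ need not be injective for intermediate $t$ (two points of $\Lambda_1$ whose $F_1$- and $F_2$-images are swapped may collide), so one cannot apply \cref{lem:induced.map.GP} to $F_t$ directly; the fix is to first embed into the thickened track lattice $\widetilde\Lambda$ — where injectivity holds because the extra $[0,1]_t$-coordinate separates the fibers — carry out the push-forward there via \cref{lem:induced.map.GP}, and only afterwards restrict to each time slice. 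Checking that this restriction is compatible with the sheaf structure and that the resulting path of morphisms is genuinely smooth (using that $\widetilde\Lambda\in\fL_d$, which requires $\pr_{\bR^d}$ on the track to remain linearly proper — true because $F_1,F_2$ are near and linearly proper) is the one place where the polynomial-growth and linear-properness hypotheses of \cref{defn:lattice.set} are genuinely used.
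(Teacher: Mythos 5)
You have the right flavor in mind (the Eilenberg-swindle using \cref{lem:flip.homotopy}), but you introduce an unnecessary and problematic extra ingredient: the straight-line interpolation $F_t$ of the two embeddings. As you yourself note, intermediate $F_t$ need not be injective or weakly uniformly discrete, so it is not a lattice map and \cref{lem:induced.map.GP} does not apply. Your proposed remedy — pushing into a "thickened track" lattice $\widetilde\Lambda$ with an extra $[0,1]_t$-coordinate and then restricting to time slices — is not available at this point in the paper: restricting a UAL Hamiltonian to a time slice is not a well-defined operation, and making the thickened track idea rigorous is essentially the content of \cref{exmp:Lipschitz.homotopy}~(2) and \cref{cor:coarse.homotopy.Hamiltonian} in Section~4, which depend on the $\Omega$-spectrum machinery (Kitaev pump, adiabatic interpolation) built downstream of the present lemma. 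Using them here would be circular.

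The intended argument (``same as \cref{lem:H-monoid}'') needs no interpolation of embeddings at all. Work inside four disjoint copies $\Lambda_2^{(1)},\dots,\Lambda_2^{(4)}$ as you set up, but instead of the site-matching flip $\flip_{12}\otimes\flip_{34}$, apply the $SO(4)$-argument of \cref{lem:flip.homotopy} cluster-by-cluster to the four sites $\{(F_1(\bm x),1),(F_2(\bm x),2),(F_1(\bm x),3),(F_2(\bm x),4)\}$ for each $\bm x\in\Lambda_1$ (and the identity on everything else). Each cluster has uniformly bounded diameter because $F_1$ and $F_2$ are near (and the $\fm$-relabelling does not affect the metric), and all four sites carry the same $\cA_{\lambda(\bm x)}$ because $F_1,F_2$ are over $\fR$; hence the resulting $\widetilde{\flip}{}^s$ is a smooth family of LG automorphisms fixing $\sfh$. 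At $s=1$ this flip sends $F_{1,*}\sfH\boxtimes\sfh^{\boxtimes 3}$ directly to $\sfh\boxtimes F_{2,*}\sfH\boxtimes\sfh\boxtimes\sfh$, which is identified with $F_{2,*}\sfH\boxtimes\sfh^{\boxtimes 3}$ in $\sGP_d$ up to the relabelling homotopy of \cref{rmk:Ainfty}. In short: the nearness hypothesis is used to make the $F_1(\bm x)\leftrightarrow F_2(\bm x)$ flip itself local, not to justify a geometric interpolation of the embeddings, and once this is noticed your stage~3 can be dropped entirely.
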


For a linearly proper large-scale Lipschitz map $F \colon \bR^d \to \bR^{d'}$, the map
\begin{align*}
    \widetilde{F} \coloneqq (F \times \id_{\bR^d}) \times \id_{\bR^\infty} \times \id_{\fR} \colon \bR^d \times \bR^{\infty} \times \fR \to (\bR^{d'} \times \bR^d) \times \bR^\infty \times \fR 
\end{align*}
is an example of an injective linearly proper large-scale Lipschitz map over $\fR$ that satisfies the following two conditions: (i) $\pr_{\bR^{d'}} \circ \tilde{F} = F \circ \pr_{\bR^d}$, and (ii) $\widetilde{F}$ sends $\Lambda \in \fL_{\fR,d}$ to $\widetilde{F}(\Lambda) \in \fL_{\fR,d'}$. 
By using this $\widetilde{F}$, we define the morphism of sheaves
\begin{align}
    F_* \coloneqq \colim_{\Lambda \in \fL_{\fR,d}} \Big( (\widetilde{F}|_{\Lambda})_* \colon \sGP(\Lambda \midbar \sM) \to \sGP(\widetilde{F}(\Lambda) \midbar \sM) \Big) \colon \sGP_{\fR,d}(\sM) \to  \sGP_{\fR,d'}(\sM).\label{eqn:push.quantum.system}
\end{align}

It will be proved in \cref{exmp:Lipschitz.homotopy} that, for $F \colon \bR^d \to \bR^d$, two choices of its lifts $\widetilde{F}_1$, $\widetilde{F}_2$ satisfying (i), (ii) induce the homotopic morphisms $\widetilde{F}_{1,*} \simeq \widetilde{F}_{2,*}$ on the sheaves of gapped UAL Hamiltonians. 
In particular, the assignment $F \mapsto F_*$ is functorial up to homotopy; there is a canonical homotopy of $F_{2,*} \circ F_{1,*}$ and $(F_2 \circ F_1)_*$ for any $F_1 \colon \bR^{d_1} \to \bR^{d_2}$ and $F_2 \colon \bR^{d_2} \to \bR^{d_3}$. This functoriality will be generalized to sheaves of IG UAL Hamiltonians on a more general class of metric spaces in \cref{subsection:localizing.path}.

\subsection{The sheaf of fermionic and on-site symmetric invertible phases}
The spaces of fermionic or symmetry-protected topological phases are defined similarly. 
Let $G$ be a compact Lie group, not necessarily connected. 
Let $\phi \colon G \to \bZ/2$ be a group homomorphism. 
We say that $(\sH,\pi)$ is a $\phi$-linear unitary representation of $G$ if $\pi$ is a continuous group homomorphism from $G$ to the group of linear or antilinear unitaries on $\sH$ such that $\pi_g$ is linear if $\phi(g)=0$ and antilinear if $\phi(g) =1$. 
\begin{defn}\label{notn.internal.degree.spt}
We introduce the following variations of \cref{notn:internal.degree}.
    \begin{itemize}
        \item We say that $\fR$ is a \emph{fermionic internal degrees of freedom} if it is a countable set of pairs $\lambda =(\widehat{\sH}_{\lambda},\Omega_{\lambda})$, where $\widehat{\sH}_{\lambda}$ is a $\bZ/2$-graded Hilbert space with $2 \leq \dim \widehat{\sH}_{\lambda} <\infty$ and $\Omega_{\lambda} \in \sH_{\lambda}$ is a fixed even unit vector.
        \item We say that $\fR$ is a \emph{bosonic (resp.\ fermionic) internal degrees of freedom with on-site $(G,\phi)$-symmetry} if it is set of triples $\lambda =(\sH_{\lambda},\pi_\lambda,\Omega_{\lambda})$, where $(\sH_{\lambda}, \pi_{\lambda})$ is a representation (resp.\ $\bZ/2$-graded representation) of $G$ that is $\phi$-linear and $2 \leq \dim \sH_{\lambda} <\infty$, and $\Omega_{\lambda} \in \sH_{\lambda}$ is a fixed $G$-invariant unit vector (resp.\ even unit vector).
    \end{itemize}
    For $\lambda \in \fR$ in each case, let $\cA_{\lambda} \coloneqq \cB(\sH_{\lambda})$ or $\cB(\widehat{\sH}_{\lambda})$ and $\sfh_{\lambda} \coloneqq 2(1-\Omega_{\lambda} \otimes \Omega_{\lambda}^*)$.  
\end{defn}

Using these internal degrees of freedom, we define the set of lattices $\fL_{\fR, d}$ in the same way as in \cref{defn:lattice.set}. 
In the case of the bosonic $(G,\phi)$-symmetric lattice, we place on each $\bm{x} \in \Lambda$ the local observable algebra $\cA_{\bm{x}} \coloneqq \cA_{\lambda(\bm{x})}$ and the trivial Hamiltonian $\sfh_{\bm{x}} \coloneqq \sfh_{\lambda(\bm{x})}$. 
For $\Lambda \in \fL_{\fR, d}$, the local observable algebra $\cA_{\Lambda}$ is defined in the same way as \cref{defn:local.observable}. 
In the case of fermionic lattice, possibly with on-site $(G,\phi)$-symmetry, what we place on each $\bm{x}$ is the same, but the local observable algebra is replaced with the graded tensor product as
\begin{align*}
\widehat{\cA}_{\Lambda} \coloneqq \mathop{\widehat{\bigotimes}} _{\bm{x} \in \Lambda} \cA_{\bm{x}}, \quad  \sfh = (\sfh_{\bm{x}} )_{\bm{x} \in \Lambda} \in \fH_{\Lambda}^{\al}, \quad \omega_0\coloneqq \bigotimes_{\bm{x}} \langle \blank \Omega_{\bm{x}},\Omega_{\bm{x}} \rangle. 
\end{align*}
For these cases, the sheaves of smooth families of gapped UAL Hamiltonians are defined in the same way as \cref{defn:gapped.sheaf}. 
Moreover, the composite product of lattices and Hamiltonians are defined in the same way as  \eqref{eqn:composition.lattice} and \eqref{eqn:composition.system} coming from the map $\fm$ fixed in \cref{line.mu}. 
This product makes $\sGP_{\fR,d}$ a local commutative H-monoid, which leads us to the definition of the sheaves of IG UAL Hamiltonians. 
We only remark that the proof of \cref{lem:H-monoid} works even in this setting since $\bR[\fS_4] \subset \cB(\sH_{\lambda}^{\otimes 4})$ commutes with the $\phi$-twisted representation of $G$, and is an even subalgebra in the fermionic case.

In the bosonic case, we use the same letters $\sGP(\Lambda \midbar \sM)$, $\sGP_{\fR,d}(\sM)$, $\sIP_{\fR,d}(\sM)$ for the corresponding sheaves if $\fR$ has an on-site $(G,\phi)$-symmetry. 
In the fermionic case, we introduce the following notation. 
\begin{defn}
    Let $\fR$ be a set of fermionic internal degrees of freedom, possibly with on-site $(G,\phi)$-symmetry. 
    For $\Lambda \in \fL_{\fR,d}$, let 
    \begin{align*}
    \sfGP(\Lambda \midbar \sM ) \coloneqq \{ \sfH \colon \sM \to \ffH_{\Lambda}^{\al} \mid \text{smooth in the sense of \cref{defn:smooth}} \}. 
    \end{align*} 
    The sheaves $\sfGP_{\fR,d}(\sM)$ and $\sfIP_{\fR,d}(\sM)$ are defined in the same way as \cref{defn:gapped.sheaf} and \cref{defn:gapped.invertible}, coming from the product on $\sfGP_{\fR,d}(\sM)$ via the map $\fm$. 
\end{defn}

For $\Lambda \in \fL_{\fR,d}$, the action of $G$ on the observable algebra is denoted by
\begin{align*} \beta_g \coloneqq \bigotimes_{\bm{x} \in \Lambda } \Ad (\pi_g) \in \overline{\Aut}(\widehat{\cA}_{\Lambda}^{\al}), \end{align*}
where $\overline{\Aut}(\widehat{\cA}_{\Lambda}^{\al})$ denotes the set of linear or antilinear $\ast$-automorphisms.
\begin{lem}
    Let $\fR$ be a set of internal degrees of freedom with on-site $(G,\phi)$-symmetry, either bosonic or fermionic. 
    With the $G$-action $\beta (\sfH)(p,g ) \coloneqq \beta_g(\sfH(p))$, the sheaves $\sGP_{\fR}(\Lambda \midbar \blank)$, $\sGP_{\fR, d}$, $\sIP_{\fR, d}$, and $\bsIP_{\fR,d}$ (resp.~ $\sfGP_{\fR}(\Lambda \midbar \blank)$, $\sfGP_{\fR, d}$, $\sfIP_{\fR, d}$, and $\bsfIP_{\fR,d}$ in the fermionic case) are $G$-sheaves in the sense of \cref{defn:Gsheaf}.
\end{lem}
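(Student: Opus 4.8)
The plan is to verify \cref{defn:Gsheaf} directly, checking that the assignment $\beta$ is a morphism of sheaves satisfying the two cocycle-type relations, with the smoothness of the $G$-action being the only substantive point. First I would recall that, for a set of internal degrees of freedom with on-site $(G,\phi)$-symmetry, each $\pi_\lambda \colon G \to \cU(\sH_\lambda)$ is a continuous (possibly $\phi$-twisted) unitary representation on a finite-dimensional Hilbert space, and hence $g \mapsto \Ad(\pi_{\lambda,g})$ is a smooth map from $G$ to $\Aut(\cA_\lambda) = \Aut(M_{n_\lambda}(\bC))$, which is a finite-dimensional Lie group. Since $\beta_g = \bigotimes_{\bm{x} \in \Lambda} \Ad(\pi_{\lambda(\bm{x}),g})$ and each factor fixes $1$, the infinite tensor product is well-defined as a $\ast$-automorphism of $\cA_\Lambda$ (resp.\ $\widehat{\cA}_\Lambda$ using the graded structure, noting that $\Ad(\pi_{\lambda,g})$ preserves the $\bZ/2$-grading $\Ad(\sfP_\lambda)$ as part of the $(G,\phi)$-data), and it restricts to a continuous automorphism of $\cA_\Lambda^{\al}$ by an estimate analogous to \eqref{eqn:f-norm.push}: since $\beta_g$ acts site-wise and $\Pi_{\bm{x},r} \circ \beta_g = \beta_g \circ \Pi_{\bm{x},r}$, we have $\|\beta_g(a)\|_{\bm{x},f} = \|a\|_{\bm{x},f}$ for every $f \in \cF$, so $\beta_g$ is an isometry for each almost-local seminorm.

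Next I would address smoothness of the $G$-action, i.e., that $\beta$ is realized as a morphism of sheaves $\sGP(\Lambda \midbar \blank) \to \sHom(G, \sGP(\Lambda \midbar \blank))$. Concretely, for $\sfH \in \sGP(\Lambda \midbar \sM)$, I must show that $(p,g) \mapsto \beta_g(\sfH(p))$ is smooth on $\sM \times G$ in the sense of \cref{defn:smooth}. The key observation is that, locally on a chart $(\sU; x_1,\dots,x_n)$ of $\sM$ and a chart of $G$ with coordinates $y_1,\dots,y_m$, the operator $\beta_g(\sfH_{\bm{x}}(p)) = \Ad(\pi_{\lambda(\bm{x}),g})(\sfH_{\bm{x}}(p))$ depends smoothly on $(p,g)$ because $\pi_{\lambda(\bm{x}),g}$ is a smooth function of $g$ valued in a finite-dimensional group, and conjugation is a smooth operation; crucially, $\pi_{\lambda(\bm{x}),g}$ takes values in the unitary group of a \emph{finite}-dimensional Hilbert space, so its derivatives in $g$ are controlled by the Lie algebra of $G$ uniformly in $\bm{x}$ (there are only finitely many isomorphism classes of $\lambda \in \fR$, or more precisely the operator norm $\|\pi_{\lambda,g}\| = 1$ and the derivatives of $\Ad(\pi_{\lambda,\cdot})$ are bounded since $G$ is compact). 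Hence mixed partial derivatives $\partial_y^J \partial_x^I \beta_g(\sfH_{\bm{x}}(p))$ are finite sums of conjugates of derivatives $\partial_x^I \sfH_{\bm{x}}(p)$ by elements of $\cA_{\bm{x}}$ of bounded norm, so $\vvert \beta(\sfH)\vvert_{\sU \times \sV, C^k, f} \lesssim \vvert \sfH \vvert_{\sU, C^k, f} < \infty$. Condition (3) of \cref{defn:smooth} for the ground state $\omega_{\beta_g \sfH} = \omega_{\sfH(p)} \circ \beta_g^{-1}$ follows similarly, since $\beta_g$ preserves each almost-local seminorm. For the fermionic and fermionic-$(G,\phi)$ cases the same argument runs with the graded conditional expectation of \cref{rmk:graded.conditional.exp} and the fermionic almost-local norm.

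Then I would verify the two algebraic relations of \cref{defn:Gsheaf}: $(\id_G \times \nu) \circ \nu = (m^* \times \id_\sF) \circ \nu$ and the unital condition. Pointwise on $G$ these read $\beta_g \circ \beta_h = \beta_{gh}$ and $\beta_e = \id$, which are immediate from $\pi_{\lambda,g} \pi_{\lambda,h} = \pi_{\lambda,gh}$ (up to phases that cancel in $\Ad$) and $\pi_{\lambda,e} = 1$, applied site-wise and extended to the infinite (graded) tensor product and its almost-local completion. That the relations hold as morphisms of sheaves, not merely pointwise, follows because $\sGP(\Lambda\midbar\blank)$ embeds into a diffeological space (so equality of sections can be checked pointwise on plots), and the smoothness established above guarantees the relevant compositions are well-defined morphisms. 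Finally, I would check compatibility with the colimit defining $\sGP_{\fR,d}$: the maps $\sGP(\Lambda_1 \midbar \blank) \to \sGP(\Lambda_2 \midbar \blank)$, $\sfH \mapsto \sfH \boxtimes \sfh_{\Lambda_2 \setminus \Lambda_1}$, are $G$-equivariant because $\beta_g(\sfh_{\bm{x}}) = \sfh_{\bm{x}}$ by the $G$-invariance of $\Omega_\lambda$ (resp.\ $\Ad(\pi_{\lambda,g})\Omega_\lambda \otimes \Omega_\lambda^* = \Omega_\lambda \otimes \Omega_\lambda^*$), so $\beta$ descends to $\sGP_{\fR,d}$; it restricts to $\sIP_{\fR,d}$ since $\beta_g$ carries inverse systems and null-homotopies to inverse systems and null-homotopies; and it lifts to $\bsIP_{\fR,d}$ by acting diagonally on the triple $(\sfH, \check{\sfH}, \overline{\sfH})$. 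The main obstacle is the uniformity in the smoothness estimate of the second paragraph—ensuring that the $C^k$-bounds on $\beta_g(\sfH_{\bm{x}}(p))$ in the $G$-directions are uniform over all $\bm{x} \in \Lambda$—but this is resolved by the compactness of $G$ together with the fact that $\Ad \circ \pi_\lambda$ ranges over a set of smooth maps $G \to \mathrm{PU}(\sH_\lambda)$ indexed by the finitely many $\lambda$ occurring, whose derivatives are therefore uniformly bounded.
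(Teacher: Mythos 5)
Your overall checklist (smoothness of $\beta(\sfH)$, the cocycle relations, equivariance of the maps $\sfH\mapsto\sfH\boxtimes\sfh$ so that $\beta$ descends through the colimit, and restriction to the invertible and refined subsheaves) is the right structure, and much of it is sound bookkeeping that the paper leaves implicit. The paper's own proof is a two-line reduction: it declares that it suffices to show $\beta(\sfH)$ is smooth in the sense of \cref{defn:smooth} and then appeals to \cref{lem:smooth}, which says a smooth map $\alpha\colon\sM\to\cG_{\Lambda}$ applied to a smooth family of derivations or Hamiltonians is again smooth. So the paper's route outsources the real content to the smoothness of $g\mapsto\beta_g$ as a map into $\cG_{\Lambda}$; your route attempts the same estimate by hand. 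In spirit the two are the same, and the hard point is identical.

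That hard point is where your argument has a genuine gap. You justify the uniformity (in $\bm{x}\in\Lambda$) of the $g$-derivative bounds by asserting, first, that "there are only finitely many isomorphism classes of $\lambda\in\fR$", and second, that "the derivatives of $\Ad(\pi_{\lambda,\cdot})$ are bounded since $G$ is compact", culminating in "only finitely many $\lambda$ occurring, whose derivatives are therefore uniformly bounded". The first claim is false: $\fR$ is by definition a countable set (\cref{notn:internal.degree}, \cref{notn.internal.degree.spt}), and for a lattice $\Lambda\in\fL_{\fR,d}$ the map $\lambda\colon\Lambda\to\fR$ need not have finite image. The second claim does not repair this: compactness of $G$ bounds $\partial_g^J \Ad(\pi_{\lambda,g})$ over $g$ for each \emph{fixed} $\lambda$, but the bound grows with $\lambda$; the relevant quantities are operator norms of $d\pi_{\lambda}(\xi)$ for $\xi\in\fg$, and these scale with the weights of $\pi_\lambda$ — already for $G=U(1)$ and $\pi(e^{i\theta})=\diag(e^{im\theta},e^{in\theta})$ on $\bC^2$ one has $\|\ad\,d\pi(\xi)\|\sim|m-n|$, which is unbounded along a sequence of such $\lambda\in\fR$. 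So "uniformly bounded" is not a consequence of compactness of $G$ plus finite-dimensionality of each $\sH_\lambda$. The correct fix is to state the needed hypothesis outright: for $g\mapsto\beta_g$ to be a smooth map into $\cG_{\Lambda}$ (which is what \cref{lem:smooth} consumes), the infinitesimal generator $\bigl(d\pi_{\lambda(\bm{x})}(\xi)\bigr)_{\bm{x}\in\Lambda}$ must be a UAL derivation, i.e.\ $\sup_{\bm{x}}\|d\pi_{\lambda(\bm{x})}(\xi)\|<\infty$ uniformly over $\xi$ in a neighborhood of $0\in\fg$; this should be named as a condition on the data $(\Lambda,\lambda)$ or $\fR$, not deduced from compactness alone.
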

\begin{proof}
    It suffices to show that $\beta (\sfH)$ is smooth in the sense of \cref{defn:smooth}. It follows from \cref{lem:smooth} (3).  
\end{proof}

\begin{defn}
    For bosonic and fermionic internal degrees of freedom $\fR$ with on-site $(G,\phi)$-symmetry, use the same letter $\IP_{\fR,d}$ and $\fIP_{\fR,d}$ for the $G$-spaces
        \begin{align*} 
        \IP_{\fR,d} \coloneqq |\Sing_G \bsIP_{\fR,d} |_G, \quad \fIP_{\fR,d} \coloneqq |\Sing_G \bsfIP_{\fR,d} |_G. 
        \end{align*}
\end{defn}

\begin{rmk}
    When taking the group action into account, a $G$-invariant section of an $\sIP_{\fR,d}$ is not necessarily $G$-invariantly invertible, i.e., the homotopy inverse $\check{\sfH}$ and the null-homotopy $\overline{\sfH}$ are taken as $G$-invariant smooth maps. 
    In other words, the $G$-equivariant version of \cref{lem:bold.invertible.sheaf} may fail: the fixed point sheaves $\sIP_{\fR,d}^G$ and $\bsIP_{\fR,d}^G$ may not be weakly equivalent.  
    In the definition of the $G$-realization $\IP_{\fR,d}$, it is more relevant to use the $G$-invariant sections of $\bsIP_{\fR,d}$.
\end{rmk}

\begin{rmk}
In this paper, for simplicity, we consider only the case where a fermionic on-site symmetry consists of a group $G$ that commutes with the fermionic parity $\mathsf{P}$. 
However, in papers such as \cites{wangConstructionClassificationSymmetryprotected2020,aasenCharacterizationClassificationFermionic2022,barkeshliClassification2+1DInvertible2022}, a fermionic symmetry is formulated in terms of a group $G_f$ that incorporates these two types of symmetries, which is a central extension by the group $\bZ/2$ of fermionic parity. 
Our framework can be immediately generalized to handle such symmetries.
\end{rmk}

\section{Kitaev's conjecture}\label{section:Kitaev}
In this section, the first central part of this paper, we mathematically realize Kitaev's conjecture that the sequence of spaces $\{\IP_d  \}_d$ forms an $\Omega$-spectrum.
We begin by defining a continuous map $\kappa_d \colon \IP_d \to \Omega \IP_{d+1}$, called the Kitaev pump.
We then show that $\IP_d $ and $\Omega \IP_{d+1}$ are weakly equivalent by using the idea of adiabatic interpolation from \cref{thm:interpolation.loop}, and then show that our weak equivalence is attained by $\kappa_d$. 
Our argument for the second step differs from that of \cite{kitaevClassificationShortrangeEntangled2013}. 
A key ingredient is a version of the bulk boundary correspondence, \cref{prp:Eilenberg.swindle}, proved by using an Eilenberg swindle argument. 
Our relaxation of the concept of a lattice in $\bR^d$ to allow a non-uniform finite distance in the direction of the internal degrees of freedom (\cref{defn:lattice.set}) is precisely for the purpose of applying this argument.

\subsection{Kitaev's pump}\label{subsection:Kitaev}
This section deals with the path and loop sheaves, defined in \cref{subsubsection:homotopy.sheaf}, of the sheaves $\sIP_d$ and $\bsIP_d$ defined in \cref{eqn:IP.bold}.
For $\sfH \in \cP \sIP_d(\sM)$, we write $\ev_t \sfH \in \sIP_d(\sM)$ for its restriction to $\sM \times \{t\} $. 

\begin{defn}\label{defn:Kitaev.pump}
The Kitaev pump is a morphism of sheaves
    \begin{align*} 
        \kappa_d \colon \boldsymbol{\sIP}_d \to \Omega \sIP_{d+1} 
    \end{align*}
    that sends $\bsfH \coloneqq (\sfH , \check{\sfH}, \overline{\sfH}) \in \bsIP_d(\sM)$ to the loop $\kappa_d (\bsfH)  \in \Omega \sIP_{d+1}(\sM)$ defined by
    \begin{align*}
        \kappa_d (\bsfH)|_t = \left\{
        \begin{array}{rcccccccccccccll}
            \cdots &\boxtimes & \sfh & \boxtimes & \sfh & \boxtimes & \sfh &\boxtimes & \sfh & \boxtimes & \sfh& \boxtimes & \cdots & t= 0, \\
            \cdots &\boxtimes & \multicolumn{3}{c}{\overline{\sfH}|_{1-2t}}  & \boxtimes &\multicolumn{3}{c}{\overline{\sfH}|_{1-2t}} & \boxtimes & \multicolumn{2}{r}{\overline{\sfH}|_{1-2t}} &  \cdots & 0<t<1/2, \\
            \cdots &\boxtimes & \sfH  & \boxtimes & \check{\sfH} & \boxtimes & \sfH & \boxtimes & \check{\sfH} & \boxtimes & \sfH  & \boxtimes & \cdots & t=1/2, \\            
            \cdots&\multicolumn{2}{c}{ \flip \overline{\sfH}|_{2t-1}} & \boxtimes & \multicolumn{3}{c}{\flip \overline{\sfH}|_{2t-1}} & \boxtimes &\multicolumn{3}{c}{\flip \overline{\sfH}|_{2t-1}} & \boxtimes &  \cdots &  1/2<t<1, \\            
            \cdots &\boxtimes & \sfh & \boxtimes & \sfh & \boxtimes & \sfh &\boxtimes & \sfh & \boxtimes & \sfh& \boxtimes &  \cdots & t = 1.
        \end{array}
        \right.
    \end{align*}
    Here, $\sfH|_t $ is an abbreviation for $\ev_t \sfH= \sfH(\blank, t)$. 
    We use the same letter $\kappa_d \colon \IP_d \to \Omega \IP_{d+1}$ for the associated map of their realizations.
\end{defn}

\begin{figure}[t]
    \centering
        \begin{tikzpicture}
        \draw[->] (-1.3,1.0) -- (-1.3,-2.0);
        \node at (-2,0.85) {$t =0$};
        \node at (-2,-0.35) {$t=1/2$};
        \node at (-2,-1.55) {$t=1$};
        \node[draw] at (-0.75,0.85) {$\sfh$};
        \node[draw] at (0.25,0.85) {$\sfh$};
        \node[draw] at (1.25,0.85) {$\sfh$};
        \node[draw] at (2.25,0.85) {$\sfh$};
        \node[draw] at (3.25,0.85) {$\sfh$};
        \node[draw] at (4.25,0.85) {$\sfh$};
        \node[draw] at (5.25,0.85) {$\sfh$};
        \node[draw] at (6.25,0.85) {$\sfh$};
        \node[draw] at (7.25,0.85) {$\sfh$};
        \node[draw] at (8.25,0.85) {$\sfh$};
        \draw (-1.2,0) -- (-0.5,0.0) -- (-0.5,0.5) -- (-1.2,0.5);
        \draw (0,0) rectangle (1.5,0.5);
        \node at (0.75,0.25) {$\overline{\sfH}|_{1-2t}$};
        \draw (2,0) rectangle (3.5,0.5);
        \node at (2.75,0.25) {$\overline{\sfH}|_{1-2t}$};
        \draw (4,0) rectangle (5.5,0.5);
        \node at (4.75,0.25) {$\overline{\sfH}|_{1-2t}$};
        \draw (6,0) rectangle (7.5,0.5);
        \node at (6.75,0.25) {$\overline{\sfH}|_{1-2t}$};
        \draw (8.7,0) -- (8,0.0) -- (8,0.5) -- (8.7,0.5);
        \node[draw] at (-0.75,-0.35) {$\check{\sfH}$};
        \node[draw] at (0.25,-0.35) {$\sfH$};
        \node[draw] at (1.25,-0.35) {$\check{\sfH}$};
        \node[draw] at (2.25,-0.35) {$\sfH$};
        \node[draw] at (3.25,-0.35) {$\check{\sfH}$};
        \node[draw] at (4.25,-0.35) {$\sfH$};
        \node[draw] at (5.25,-0.35) {$\check{\sfH}$};
        \node[draw] at (6.25,-0.35) {$\sfH$};
        \node[draw] at (7.25,-0.35) {$\check{\sfH}$};
        \node[draw] at (8.25,-0.35) {$\sfH$};
        \draw (-1,-0.7) rectangle (0.5,-1.2);
        \node at (-0.25,-0.95) {$\flip \overline{\sfH}_{2t-1}$};
        \draw (1,-0.7) rectangle (2.5,-1.2);
        \node at (1.75,-0.95) {$\flip \overline{\sfH}_{2t-1}$};
        \draw (3,-0.7) rectangle (4.5,-1.2);
        \node at (3.75,-0.95) {$\flip \overline{\sfH}_{2t-1}$};
        \draw (5,-0.7) rectangle (6.5,-1.2);
        \node at (5.75,-0.95) {$\flip \overline{\sfH}_{2t-1}$};
        \draw (7,-0.7) rectangle (8.5,-1.2);
        \node at (7.75,-0.95) {$\flip \overline{\sfH}_{2t-1}$};
        \node[draw] at (-0.75,-1.55) {$\sfh$};
        \node[draw] at (0.25,-1.55) {$\sfh$};
        \node[draw] at (1.25,-1.55) {$\sfh$};
        \node[draw] at (2.25,-1.55) {$\sfh$};
        \node[draw] at (3.25,-1.55) {$\sfh$};
        \node[draw] at (4.25,-1.55) {$\sfh$};
        \node[draw] at (5.25,-1.55) {$\sfh$};
        \node[draw] at (6.25,-1.55) {$\sfh$};
        \node[draw] at (7.25,-1.55) {$\sfh$};
        \node[draw] at (8.25,-1.55) {$\sfh$};
        \end{tikzpicture}
    \caption{The picture of $\kappa_d(\bsfH)$ (here $\sfH|_t$ is an abbreviation for $\ev_t\sfH$). }
    \label{fig:picture.kappa.bold}
\end{figure}
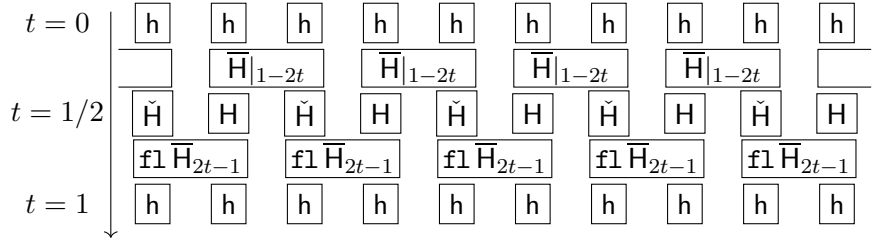
The precise definition of the infinite composite systems appearing in the above definition is as follows: 
If $(\sfH , \check{\sfH}, \overline{\sfH})$ is supported on $\Lambda \in \fL_d$ in the sense of \cref{eqn:IP.bold}, then the IG UAL Hamiltonian $\cdots \boxtimes \sfH \boxtimes \check{\sfH} \boxtimes \sfH \boxtimes \cdots $, which is also denoted by $(\sfH \boxtimes \check{\sfH})^{\boxtimes \bZ}$ hereafter, is supported on $\Lambda \times \bZ \in \fL_{d+1}$. 
Let $\iota_n \colon \Lambda \to \Lambda \times \{n\} \subset \Lambda \times \bZ$ denotes the inclusion. 
The Hamiltonian in question is given by placing an almost local operator $\iota_{2n,*}(\sfH_{\bm{x}}) \in \cA_{\Lambda \times \{2n\} }$ on the lattice point $(\bm{x},2n) \in \Lambda \times \bZ$, and $\iota_{2n+1,*}(\check{\sfH}_{\bm{x}} ) \in \cA_{\Lambda \times \{ 2n+1\} }$ on $(\bm{x},2n+1) \in \Lambda \times \bZ$. 
The resulting time evolution and the distinguished ground state are the tensor product of copies of $\tau_{\sfH}$ and $\omega_{\sfH}$. 
Therefore, it is again gapped. 
Moreover, it is invertible since $\Omega \sGP_d = \Omega \sIP_d$ by definition.

This $\kappa_d$ gives a well-defined morphism of sheaves on $\Man$ since the assignment $\bsfH \mapsto \kappa_d(\bsfH)$ clearly commutes with the inclusion of lattices $\Lambda_1 \leq \Lambda_2$.
By \cref{lem:forget.invertible}, a continuous map of realizations
\[ 
    \kappa_d \colon \IP_d = |\Sing \bsIP_d| \to |\Sing \Omega \sIP_{d+1}| \simeq |\Sing \Omega \bsIP_{d+1}| \simeq \Omega \IP_{d+1}
\]
is induced. More strongly, such $\kappa_d$ is explicitly realized by a morphism of sheaves in the following way. 
The morphism $\boldsymbol{\kappa}_d$ makes the sequence $\{ \bsIP_d, \boldsymbol{\kappa}_d \}$ a spectrum object in the category $\Sh(\Man)$. 
\begin{prp}\label{thm:inverse.pump}
There is a morphism of sheaves $\boldsymbol{\kappa }_d \colon \bsIP_d \to \Omega \bsIP_{d+1}$ such that $\mathsf{fg} \circ \boldsymbol{\kappa}_d = \kappa_d$. 
\end{prp}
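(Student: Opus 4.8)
The plan is to give $\boldsymbol{\kappa}_d(\bsfH)$ explicitly as a triple in $\Omega\bsIP_{d+1}(\sM)$ whose first component is the loop $\kappa_d(\bsfH)$ of \cref{defn:Kitaev.pump}. For the homotopy-inverse component, set $\check{\bsfH} \coloneqq (\check{\sfH}, \sfH, \flip\,\overline{\sfH}) \in \bsIP_d(\sM)$: this is again a section of $\bsIP_d$, because $\flip$ (the exchange of the two $\Lambda$-copies of $\Lambda\boxtimes\Lambda$) is an injective $\fR$-preserving isometry and hence induces a morphism $\flip_* \colon \sGP(\Lambda\boxtimes\Lambda\midbar\blank)\to\sGP(\Lambda\boxtimes\Lambda\midbar\blank)$ by \cref{lem:induced.map.GP}, so $\flip\,\overline{\sfH}$ is a smooth path from $\check{\sfH}\boxtimes\sfH = \flip(\sfH\boxtimes\check{\sfH})$ to $\flip\,\sfh = \sfh$. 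Take $\kappa_d(\check{\bsfH})$ as the second component. Since $\kappa_d$ sends every section of $\bsIP_d$ to a based loop at $\sfh$, the pair $(\kappa_d(\bsfH), \kappa_d(\check{\bsfH}))$ is a based loop in $\sIP_{d+1}^{\,2}$, so only the null-homotopy component remains to be produced.

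For the third component one runs the pump of $\bsfH$ and of $\check{\bsfH}$ in parallel on two copies of $\Lambda\times\bZ$ and contracts the fused lattice ``dimer by dimer'': over each of the cells $0<t<\tfrac12$, $t=\tfrac12$, $\tfrac12<t<1$ of the loop parameter, fusing the $\bZ$-site $n$ of the $\bsfH$-copy with the $\bZ$-site $n$ of the $\check{\bsfH}$-copy produces, on each relevant bond, a $\boxtimes$-product of the form $\overline{\sfH}|_s\boxtimes\flip\,\overline{\sfH}|_s$ (up to leg reordering), which at $s=1$ equals $\sfh\boxtimes\sfh=\sfh$ and at $s=0$ equals $(\sfH\boxtimes\check{\sfH})\boxtimes(\check{\sfH}\boxtimes\sfH)$, the latter being contracted to $\sfh$ by applying $\overline{\sfH}$ to the first two legs and $\flip\,\overline{\sfH}$ to the last two. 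Concatenating these bondwise contractions gives $\kappa_d^-(\bsfH)\in\cP\sGP_{d+1}(\sM)$ running from $\kappa_d(\bsfH)\boxtimes\kappa_d(\check{\bsfH})$ to $\sfh$; crucially it is the constant path at the loop-parameter values $t=0,1$, since there every operator involved is already $\sfh$, so the resulting triple lands in $\Omega\bsIP_{d+1}$ and not merely in $\cP\bsIP_{d+1}$, and forgetting its first component returns $\kappa_d(\bsfH)$.

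The hard part is to make all of this simultaneously \emph{natural} -- a morphism of sheaves, compatible with the lattice inclusions in $\sGP_{d+1}=\colim_{\Lambda}\sGP(\Lambda\midbar\blank)$ and independent of the auxiliary identifications via $\fm$ (cf.\ \eqref{eqn:composition.lattice}) -- and \emph{properly based}, which requires gluing the $t<\tfrac12$ and $t>\tfrac12$ halves of $\kappa^-_d(\bsfH)$ at $t=\tfrac12$, where the dimerizations have different supports. I would resolve both issues exactly as in the proof of \cref{lem:H-monoid}: the re-pairing homotopies comparing the even-bond, the sitewise, and the odd-bond dimerizations, as well as the coherence between the choices made for different lattices, are produced inductively over the cells of the nerve of $\fL_d$ using the contractibility input of \cref{rmk:Ainfty} and \cref{lem:flip.homotopy.use}, so that they assemble into one natural transformation $\boldsymbol{\kappa}_d$ with $\mathsf{fg}\circ\boldsymbol{\kappa}_d = \kappa_d$. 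As a sanity check, existence of \emph{some} such lift is automatic abstractly: $\mathsf{fg}\colon\bsIP_{d+1}\to\sIP_{d+1}$ is a fibration (by the proof of \cref{lem:bold.invertible.sheaf} via \eqref{eqn:fibration.forget}) and a weak equivalence (\cref{lem:forget.invertible}), hence $\Omega\mathsf{fg}$ is an acyclic fibration, against which $\kappa_d$ lifts after a cofibrant replacement of $\bsIP_d$; the explicit construction above is what is needed to obtain the lift on the nose and functorially.
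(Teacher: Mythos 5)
Your overall structure---giving $\boldsymbol\kappa_d(\bsfH)$ explicitly as a triple whose first component is $\kappa_d(\bsfH)$---matches the paper's, and your verification that $\check{\bsfH}=(\check\sfH,\sfH,\flip\overline\sfH)$ is a section of $\bsIP_d$ via $\flip_*$ and \cref{lem:induced.map.GP} is correct. You then diverge: you take $\kappa_d(\check{\bsfH})$ as the second component, whereas the paper takes the \emph{time-reversed} loop $\check\kappa_d(\bsfH)$. These differ by the shift $T_*$ in the $\bZ$-direction (the paper records $\check\kappa_d(\bsfH)=T_*\kappa_d(\check{\bsfH})$ in the proof of \cref{lem:truncated.pump}), and the choice matters: with the reversed loop, the paper's null-homotopy is a short, generic loop-algebra argument that never looks inside $\kappa_d(\bsfH)$---reparametrize $\kappa_d(\bsfH)\boxtimes\check\kappa_d(\bsfH)$ to the concatenation $(\sfh\boxtimes\check\kappa_d(\bsfH))\circ(\kappa_d(\bsfH)\boxtimes\sfh)$, move the second leg to the first tensor factor via \cref{lem:flip.homotopy}, then cancel a path against its reverse by an explicit reparametrization. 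As the paper notes, this is just the general construction of a weak inverse $\Omega\sF^\times\to\Omega\boldsymbol\sF^\times$ for an arbitrary local commutative H-monoid sheaf $\sF$, so $\boldsymbol\kappa_d$ factors through $\Omega\sIP_{d+1}$. Your dimerization route, by contrast, exploits the internal $\boxtimes$-structure of the pump, which is more work for the same statement.

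The genuine gap is the gluing at $t=\tfrac12$, which you acknowledge but do not discharge. Your ``dimer by dimer'' contraction pairs site $n$ of the $\bsfH$-copy with site $n$ of the $\check{\bsfH}$-copy, so for $0<t<\tfrac12$ the fused $4$-site blocks sit on $\{2k,2k+1\}\times\{1,2\}$, while for $\tfrac12<t<1$ they sit on $\{2k-1,2k\}\times\{1,2\}$; the two regimes yield different null-homotopies, agreeing at $s=0$ but not for $0<s<1$. Appealing to ``the proof of \cref{lem:H-monoid}'' does not repair this: that machinery (and \cref{rmk:Ainfty}, \cref{lem:flip.homotopy.use}) produces coherence homotopies \emph{between morphisms of sheaves} indexed over the nerve of $\fL_d$, i.e.\ structural data for the local H-monoid; it is not the same as producing the single smooth section of $\sGP_{d+1}(\sM\times[0,1]_t\times[0,1]_s)$ with the prescribed boundary values that constitutes the third component of $\boldsymbol\kappa_d(\bsfH)$. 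What you would actually need at $t=\tfrac12$ is a re-pairing homotopy of the kind supplied by \cref{lem:lem:flip.middle.path} and the $\mathtt{FL}$-construction of \cref{lem:spatial.localizing.path}. These tools are developed later in the paper precisely for the truncated and localized pumps (\cref{lem:truncated.pump}, \cref{lem:spatial.localizing.path}), where the generic loop-algebra shortcut is unavailable; so your idea is not off-base---it is essentially what the paper does in the harder variants---but for \cref{thm:inverse.pump} itself it is unnecessary extra effort, and as written the key step is missing.

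Your closing abstract remark is correctly qualified: $\mathsf{fg}$ is indeed an acyclic fibration (fibration by \eqref{eqn:fibration.forget}, weak equivalence by \cref{lem:forget.invertible}), but lifting against a cofibrant replacement of $\bsIP_d$ only produces a lift on that replacement, not the strict equality $\mathsf{fg}\circ\boldsymbol\kappa_d=\kappa_d$ on $\bsIP_d$ itself, so the explicit construction cannot be avoided.
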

\begin{proof}
Let $\check{\kappa}_d(\bsfH)$ be the inverse path, and let $\bar{\kappa}_d (\bsfH )$ be the composition of smooth homotopies
\begin{align}
\begin{split}
    \kappa_d(\bsfH) \boxtimes \check{\kappa}_d(\bsfH) \sim {}&{} 
    (\sfh \boxtimes \check{\kappa}_d(\bsfH) ) \circ 
    (\kappa_d(\bsfH) \boxtimes \sfh )\\
    \sim {}&{} 
    (\check{\kappa}_d(\bsfH) \boxtimes \sfh) \circ (\kappa_d(\bsfH) \boxtimes \sfh)\\
    \sim {}&{} \sfh,
\end{split}\label{eqn:loop.reverse.homotopy}
\end{align}
where the first homotopy is given by the reparametrization of $\kappa_d(\bsfH)$ and $\check{\kappa}_d(\bsfH)$, the second homotopy is given by \cref{lem:flip.homotopy}, and the last homotopy is given by the reparametrization 
\begin{align*}
    (p,t,s) \mapsto 
    \begin{cases}
        \kappa_d(\bsfH)(p,st) \boxtimes \sfh & \text{ if $t \leq 1/2$, }\\
        \kappa_d(\bsfH)(p,s(1-t)) \boxtimes \sfh & \text{ if $t \geq 1/2$. }
    \end{cases}
\end{align*}
Indeed, this is a special case of the general construction of the weak inverse $\Omega\sF^\times \to \Omega \boldsymbol{\sF}^\times$ of a local commutative H-monoid sheaf $\sF$. 
\if0
For $(\sfH, \check{\sfH}, \overline{\sfH}) \in \bsIP_d (\sM)$, set 
\begin{align*} 
    \check{\kappa}_d(\sfH, \check{\sfH}, \overline{\sfH}) \coloneqq R_* ( \kappa_d(\sfH, \check{\sfH}, \overline{\sfH})),
\end{align*}
where $R \colon \Lambda \times \bZ \to \Lambda \times \bZ$ denotes the reflection in the $\mathbb{Z}$-direction, i.e., $R(\bm{x},n) = T(\bm{x},-n)$. 

We define the smooth homotopy $\bar{\kappa}_d(\bsfH) \in \sIP_d(\sM \times [0,1]_t \times [0,1]_s )$ of $\kappa_d(\bsfH) \boxtimes \check{\kappa}_d(\bsfH) \in \sIP_d(\sM \times [0,1]_t)$ and $\sfh$, relative to $\sfh$ on $\sM \times \{ 0,1\}$, in the following way. 
First, we apply \cref{lem:lem:flip.middle.path} to the smooth homotopy $\kappa_d(\bsfH) \boxtimes \check{\kappa}_d(\bsfH)|_{\sM \times [1/2,1]}$ connecting 
\begin{align*} 
    \ev_{1/2}\kappa_d(\bsfH) \boxtimes \ev_{1/2}\check{\kappa}_d(\bsfH) = \ev_{1/2}\kappa_d(\bsfH) \boxtimes \ev_{1/2}\kappa_d(\bsfH)
\end{align*}
and $\sfh \boxtimes \sfh$. 
Then, we obtain that $\kappa_d(\bsfH) \boxtimes \check{\kappa}_d(\bsfH)$ is smoothly homotopic to $\sfH' \in \Omega \sIP((\Lambda \times \bZ)^{\boxtimes 2}\midbar \sM)$ given by
\begin{align*} 
    \sfH'(p,t) \coloneqq 
    \begin{cases}
    \big( \kappa_d(\bsfH) \boxtimes \check{\kappa}_d(\bsfH) \big)(p,t)=  \big( \overline{\sfH}{}^{\boxtimes \bZ} 
    \boxtimes   
     T_*( \flip \overline{\sfH})^{\boxtimes \bZ} \big)(p,1-2t) & \text{ if $t < 1/2$,}\\
    \big( \flip (\kappa_d(\bsfH) \boxtimes \check{\kappa}_d(\bsfH))\big)(p,t) = \big(  \overline{\sfH}{}^{\boxtimes \bZ}
    \boxtimes   
    T_* (\flip  \overline{\sfH})^{\boxtimes \bZ} \big)(p,2t-1) & \text{ if $t \geq 1/2$.}
    \end{cases}
\end{align*}
The construction of $\sfH'$ is illustrated in \cref{fig:picture.kappa.bold}. 
It is smoothly homotopic to $\sfh$ via 
\begin{align*}
    \widetilde{\sfH} (p,t,s) =  \big( \overline{\sfH}(p, |1-2t| \cdot (1-s))\big)^{\boxtimes \bZ}
    \boxtimes   
    \big( T_* \big( \overline{\sfH}(p, |1-2t| \cdot (1-s) )\big) ^{\boxtimes \bZ}.
\end{align*}
We define $\bar{\kappa}_d(\bsfH)$ by the composition of these smooth homotopies, after being smoothed as \cref{rmk:smoothing.path}.
Then the triple 
\begin{align*} 
    \boldsymbol{\kappa}_d(\bsfH) \coloneqq (\kappa_d(\bsfH), \check{\kappa}_d(\bsfH) , \bar{\kappa}_d(\bsfH) ) \in \Omega \bsIP_d(\sM)
\end{align*}
gives the desired morphism.
\fi
\end{proof}

\begin{rmk}\label{rmk:smoothing.path}
    In the above proof, and throughout this paper, we treat a section on $\sM \times [0,1]$ as a smooth path even if it is not constant at both endpoints $t \in \partial [0,1]$, although it is not a section of the path sheaf $\cP\sF$ in the sense of \eqref{eqn:path.sheaf}. 
    A typical example is the convex combination $t\sfH_1 + (1-t)\sfH_2$.   
    Although we do not explicitly mention this each time, one should assume that these paths are implicitly reparametrized by a bump function, that is, a monotonically increasing smooth function that takes the value $0$ on $(-\infty,\varepsilon]$ and $1$ on $[1-\varepsilon,\infty)$ for some $\varepsilon >0$.
\end{rmk}

This Kitaev pump has a version truncated to the half subspace, which plays a fundamental role in the theory.
\begin{lem}\label{lem:truncated.pump}
There is a morphism $\boldsymbol{\kappa}_d^R \colon \bsIP_d \to \cP_{\sfh} \bsIP_{d+1}$ such that the following hold:  
\begin{enumerate}
    \item For any $\bsfH \in \bsIP_d(\sM)$ supported on $\Lambda$, $ \boldsymbol{\kappa}_d^R (\bsfH)$ is supported on $\Lambda \times \bZ_{\geq 0}$. 
    \item For any $\bsfH \in \bsIP_d(\sM)$, we have $ \kappa_d^R (\bsfH)_{(\bm{x} ,n)} = \kappa_d (\bsfH)_{(\bm{x} ,n)}$, $ \check{\kappa}_d^R (\bsfH)_{(\bm{x} ,n)} = \check{\kappa}_d (\bsfH)_{(\bm{x} ,n)}$, and $ \bar{\kappa}_d^R (\bsfH)_{(\bm{x} ,n)} = \bar{\kappa}_d (\bsfH)_{(\bm{x} ,n)}$ for any $n \geq 2$. 
    \item For any $\bsfH \in \bsIP_d(\sM)$ supported on $\Lambda$, we have $\ev_1\boldsymbol{\kappa}_d^R (\bsfH) = \bsfH \boxtimes \sfh^{\boxtimes \bZ_{\geq 1}} \in \bsIP(\Lambda \times \bZ_{\geq 0}\midbar \sM)$.
\end{enumerate}
\end{lem}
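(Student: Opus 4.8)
The plan is to construct $\boldsymbol{\kappa}_d^R$ by mimicking the definition of $\boldsymbol{\kappa}_d$ from \cref{thm:inverse.pump}, but replacing the index set $\Lambda \times \bZ$ with the half-lattice $\Lambda \times \bZ_{\geq 0}$ and modifying the Hamiltonian near the boundary lattice point $n=0$ so that the resulting loop starts and ends at $\sfh$. The key observation is that $\kappa_d(\bsfH)|_t$ is a \emph{spatial} interpolation: away from the boundary it looks like $\cdots \boxtimes \overline{\sfH}|_{1-2t} \boxtimes \overline{\sfH}|_{1-2t} \boxtimes \cdots$ on the $\bZ$-direction, and the truncation to $\bZ_{\geq 0}$ simply discards the tensor factors at negative integers. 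Since $\overline{\sfH}$ is a null-homotopy of $\sfH \boxtimes \check{\sfH}$, the half-infinite tensor product $\cdots \boxtimes \sfH \boxtimes \check{\sfH}$ restricted to $\bZ_{\geq 0}$ has a well-defined tensor-product ground state and gap, so the truncated loop $\kappa_d^R(\bsfH)|_t$ consists of gapped UAL Hamiltonians on $\Lambda \times \bZ_{\geq 0} \in \fL_{d+1}$; invertibility is automatic since $\Omega \sGP_{d+1} = \Omega \sIP_{d+1}$ by the same argument as for $\kappa_d$.

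First I would set the lattice: if $\bsfH$ is supported on $\Lambda \in \fL_d$, define $\kappa_d^R(\bsfH)$ on $\Lambda \times \bZ_{\geq 0}$, placing $\iota_{2n,*}(\overline{\sfH}_{\bm{x}}|_{1-2t})$ on $(\bm{x},2n)$ and $\iota_{2n+1,*}(\overline{\sfH}_{\bm{x}}|_{1-2t})$ on $(\bm{x},2n+1)$ for $t \le 1/2$, and analogously with $\flip \overline{\sfH}|_{2t-1}$ for $t \ge 1/2$, so that $\kappa_d^R(\bsfH)|_{1/2} = \sfH \boxtimes \check{\sfH} \boxtimes \sfH \boxtimes \cdots$. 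For $t=0$ and $t=1$ one gets $\overline{\sfH}|_1 = \sfh$ and $\flip\overline{\sfH}|_1 = \flip\sfh = \sfh$ on every factor, so the loop automatically closes up at $\sfh$ without any boundary modification — this is exactly parallel to the structure of $\kappa_d$ and is what makes condition (3) of the statement the natural endpoint to check. Indeed, at $t=1$ (or rather $t=0$, where $\overline{\sfH}|_1 = \sfh$), the value at each site is $\sfh$; and by definition $\ev_1\boldsymbol{\kappa}_d^R(\bsfH)$ means evaluation at the \emph{path} variable $s$ of the refinement data $\boldsymbol{\kappa}_d^R$, which I would arrange (as in the proof of \cref{thm:inverse.pump}) so that the homotopy $\bar\kappa_d^R$ deforms $\kappa_d^R(\bsfH) \boxtimes \check\kappa_d^R(\bsfH)$ back to $\bsfH \boxtimes \sfh^{\boxtimes \bZ_{\geq 1}}$; here the key point is that on $\Lambda \times \bZ_{\geq 0}$ the boundary factor at $n=0$ is not glued to anything on the left, so the null-homotopy $\overline{\sfH}$ at the sites $n \geq 1$ can be ``unwound'' in pairs via the reparametrization homotopy $\widetilde{\sfH}(p,t,s)$, leaving the $n=0$ factor carrying $\bsfH$ itself. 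The refinement data $\check\kappa_d^R(\bsfH)$ is the reflected-and-reversed loop, and $\bar\kappa_d^R(\bsfH)$ is the composition of reparametrization homotopies and an application of \cref{lem:flip.homotopy}, exactly as in \eqref{eqn:loop.reverse.homotopy}.

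Condition (2), the compatibility $\kappa_d^R(\bsfH)_{(\bm{x},n)} = \kappa_d(\bsfH)_{(\bm{x},n)}$ for $n \geq 2$ (and similarly for $\check\kappa$, $\bar\kappa$), is built into the construction: for $n \geq 2$ the lattice point $(\bm{x},n)$ in $\Lambda \times \bZ_{\geq 0}$ sits in the ``bulk'' where the local generators of $\kappa_d^R(\bsfH)$ and $\kappa_d(\bsfH)$ agree, since the brick decomposition of $\overline{\sfH}^{\boxtimes \bZ}$ restricted to a site at distance $\geq 2$ from the boundary coincides with that of $\overline{\sfH}^{\boxtimes \bZ_{\geq 0}}$ — one only has to notice that in the product Hamiltonian \eqref{eqn:Hamiltonian.product} the local generators at site $(\bm{x},n)$ depend only on the $n$-th tensor factor, so no boundary effect propagates. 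Well-definedness as a morphism of sheaves on $\Man$ follows, as for $\kappa_d$, from the evident commutation with the lattice inclusions $\Lambda_1 \leq \Lambda_2$. The main obstacle I anticipate is purely bookkeeping: making the reparametrizations in the homotopy $\bar\kappa_d^R$ fit together smoothly at $s=0,1$ and $t=0,1/2,1$ (cf.\ \cref{rmk:smoothing.path}), and checking that the support of every intermediate Hamiltonian genuinely lies in $\Lambda \times \bZ_{\geq 0}$ rather than needing extra copies — but since all the deformations are of convex-combination or reparametrization type and never introduce new internal degrees of freedom beyond those already present in $\overline{\sfH}$, the analysis of \cref{section:analysis.spin} (in particular \cref{thm:automorphic.equivalence} is not even needed here) guarantees the gap is preserved throughout.
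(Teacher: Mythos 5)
Your construction of $\kappa_d^R$ has a genuine gap, and it shows up precisely where the content of the lemma lives. You place $\overline{\sfH}|_{1-2t}$ on the pairs $(2n,2n+1)$ for $t\le 1/2$ and ``analogously'' $\flip\overline{\sfH}|_{2t-1}$ on the pairs $(2n,2n+1)$ for $t\ge 1/2$; you then assert that at $t=1$ every factor reads $\sfh$, so that $\kappa_d^R(\bsfH)$ is a \emph{loop} based at $\sfh$. This is wrong, and it cannot be fixed by reinterpreting $\ev_1$. The target of $\boldsymbol{\kappa}_d^R$ is the based-path sheaf $\cP_\sfh\bsIP_{d+1}$, not the loop sheaf; $\ev_1$ is evaluation of that path at its terminal endpoint, and condition (3) says the endpoint is $\bsfH\boxtimes\sfh^{\boxtimes\bZ_{\ge1}}$, i.e.\ the Hamiltonian component at $t=1$ must be $\sfH\boxtimes\sfh\boxtimes\sfh\boxtimes\cdots$ with a surviving $\sfH$ at the boundary site $n=0$, not $\sfh$ everywhere. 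The mechanism you are missing is the \emph{shift by one} in the pairing for $t>1/2$: the Kitaev pump $\kappa_d$ lays $\overline{\sfH}$ on pairs $(2n,2n+1)$ as $t\nearrow 1/2$ but lays $\flip\overline{\sfH}$ on pairs $(2n-1,2n)$ as $t\searrow 1/2$ (this is why $\flip\overline{\sfH}|_0=\check{\sfH}\boxtimes\sfH$ rather than $\sfH\boxtimes\check{\sfH}$, matching up across $t=1/2$). When you truncate to $\bZ_{\ge0}$ and keep the shifted pairing, the site $n=0$ is left unpaired for $t\ge 1/2$; it carries $\sfH$ from $t=1/2$ on, and that is exactly what survives at $t=1$. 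Your ``analogous'' un-shifted pairing would not even be continuous at $t=1/2$ (the left limit is $\sfH\boxtimes\check{\sfH}$ on $(0,1)$, the right limit would be $\check{\sfH}\boxtimes\sfH$).

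Because you dropped the shift, the rest of the argument is also underdetermined. Condition (2) requires $\check{\kappa}_d^R$ and $\bar{\kappa}_d^R$ to agree with $\check{\kappa}_d$ and $\bar{\kappa}_d$ at sites $n\ge 2$; the paper achieves this by exploiting the identity $\check{\kappa}_d(\sfH,\check{\sfH},\overline{\sfH})=T_*\kappa_d(\check{\sfH},\sfH,\flip\overline{\sfH})$ (where $T$ is the unit shift in $\bZ$) and defining $\check{\kappa}_d^R(\sfH,\check{\sfH},\overline{\sfH})\coloneqq\sfh\boxtimes\kappa_d^R(\check{\sfH},\sfH,\flip\overline{\sfH})$, which lines the two constructions up from $n=2$ onward. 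Simply saying ``the reflected-and-reversed loop'' does not produce this. Likewise $\bar{\kappa}_d^R$ must be cooked so that its terminal value at $t=1$ is $\overline{\sfH}$ itself (placed at site $n=0$), which the paper achieves by the explicit reparametrization $(p,s,t)\mapsto\overline{\sfH}(p,\min\{1+s-t,1\})$; your invocation of the homotopy $\widetilde{\sfH}(p,t,s)$ from \cref{lem:switch.diagram.commute} is not the right ingredient here. The upshot: the truncated pump is not ``the loop with half the lattice discarded''; it is a path from $\sfh$ to $\bsfH\boxtimes\sfh^{\boxtimes\bZ_{\geq 1}}$, and producing that endpoint is precisely what the half-period shift buys you.
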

\begin{proof}
    The morphism $\kappa_d^R \colon \bsIP_d \to \cP_{\sfh} \sIP_{d+1}$ is defined by the truncated version of the Kitaev's pump in \cref{defn:Kitaev.pump} as
\begin{align*}
    \ev_t\kappa_d^R (\bsfH) = \left\{
        \begin{array}{ccccccccccccll}
             \sfh & \boxtimes & \sfh & \boxtimes & \sfh &\boxtimes & \sfh & \boxtimes & \sfh& \boxtimes & \cdots & t\leq 0, \\
             \multicolumn{3}{c}{\overline{\sfH}|_{1-2t}}  & \boxtimes &\multicolumn{3}{c}{\overline{\sfH}|_{1-2t}} & \boxtimes & \multicolumn{2}{r}{\overline{\sfH}|_{1-2t}} &  \cdots & 0<t<1/2, \\
             \sfH  & \boxtimes & \check{\sfH} & \boxtimes & \sfH & \boxtimes & \check{\sfH} & \boxtimes & \sfH  & \boxtimes & \cdots & t=1/2, \\            
            \sfH&  \boxtimes & \multicolumn{3}{c}{\flip \overline{\sfH}|_{2t-1}} & \boxtimes &\multicolumn{3}{c}{ \flip \overline{\sfH}|_{2t-1}} & \boxtimes &  \cdots &  1/2<t<1, \\            
             \sfH & \boxtimes & \sfh & \boxtimes & \sfh &\boxtimes & \sfh & \boxtimes & \sfh& \boxtimes &  \cdots & t \geq 1.
        \end{array}
        \right.
\end{align*}

Notice that the reversed path $\check{\kappa}_d(\sfH,\check{\sfH},\overline{\sfH})$ is identical to $ T_* \kappa_d(\check{\sfH},\sfH, \flip \overline{\sfH})$, where $T \colon \Lambda \times \bZ \to \Lambda \times \bZ$ is the shift in the $\bZ$-direction by $1$. Therefore, 
\[
    \check{\kappa}_d^R (\sfH,\check{\sfH},\overline{\sfH}) \coloneqq T_* \kappa_d^R(\check{\sfH}, \sfH, \flip \overline{\sfH}) = \sfh \boxtimes \kappa_d^R(\check{\sfH}, \sfH, \flip \overline{\sfH})
\]
satisfies (2). Along the construction in \eqref{eqn:loop.reverse.homotopy}, we obtain the smooth homotopy $\bar{\kappa}_d^R(\bsfH)$ satisfying the condition (2) by the composition
\begin{align*}
    \kappa_d^R(\bsfH) \boxtimes \check{\kappa}_d^R(\bsfH) \simeq  {}&{} (  (\ev_1 \kappa_d^R(\bsfH)) \boxtimes \check{\kappa}_d^R(\bsfH)) \circ (\kappa_d^R(\bsfH) \boxtimes \sfh^{\boxtimes \bN}) \\
    \simeq    {}&{} \big( \big( 
    \sfH \boxtimes \kappa_d^R(\check{\sfH},\sfH,\overline{\sfH})\big) \boxtimes \sfh^{\boxtimes \bN} \big) \circ (\kappa_d^R(\bsfH) \boxtimes \sfh^{\boxtimes \bN}) \\
    \simeq {}&{} \overline{\sfH}{}^{\mathrm{rev}} \simeq \sfh.
\end{align*}
Here, we distinguish the trivial Hamiltonians on $\Lambda \times \bN$ and $\Lambda$ by denoting them as $\sfh^{\boxtimes \bN}$ and $\sfh$, respectively. 
The third smooth homotopy is given by canceling $\overline{\sfH}$ and its reversion, as is seen from \cref{fig:picture.kappa.bold.tr}. The last homotopy is given by 
\[
    (p,s,t) \mapsto \overline{\sfH}(p,\min \{ 1+s-t ,1 \}),
\]
and hence its evaluation at $t=1$ is nothing but $\overline{\sfH}$. This shows (3). 
\end{proof}
\begin{figure}[t]
    \centering
    \begin{minipage}[b]{0.45\hsize}
        \begin{tikzpicture}
        \draw[->] (1.6,1.0) -- (1.6,-2.0);
        \node at (1.3,-1.8) {$t$};
        \node[draw] at (2.25,0.85) {$\sfh$};
        \node[draw] at (3.25,0.85) {$\sfh$};
        \node[draw] at (4.25,0.85) {$\sfh$};
        \node[draw] at (5.25,0.85) {$\sfh$};
        \node[draw] at (6.25,0.85) {$\sfh$};
        \node[draw] at (7.25,0.85) {$\sfh$};
        \node[draw] at (8.25,0.85) {$\sfh$};
        \draw (2,0) rectangle (3.5,0.5);
        \node at (2.75,0.25) {$\overline{\sfH}|_{1-2t}$};
        \draw (4,0) rectangle (5.5,0.5);
        \node at (4.75,0.25) {$\overline{\sfH}|_{1-2t}$};
        \draw (6,0) rectangle (7.5,0.5);
        \node at (6.75,0.25) {$\overline{\sfH}|_{1-2t}$};
        \draw (8.7,0) -- (8,0.0) -- (8,0.5) -- (8.7,0.5);
        \node[draw] at (2.25,-0.35) {$\sfH$};
        \node[draw] at (3.25,-0.35) {$\check{\sfH}$};
        \node[draw] at (4.25,-0.35) {$\sfH$};
        \node[draw] at (5.25,-0.35) {$\check{\sfH}$};
        \node[draw] at (6.25,-0.35) {$\sfH$};
        \node[draw] at (7.25,-0.35) {$\check{\sfH}$};
        \node[draw] at (8.25,-0.35) {$\sfH$};
        \node[draw] at (2.25,-0.95) {$\sfH$};
        \draw (3,-0.7) rectangle (4.5,-1.2);
        \node at (3.75,-0.95) {$\flip \overline{\sfH}_{2t-1}$};
        \draw (5,-0.7) rectangle (6.5,-1.2);
        \node at (5.75,-0.95) {$\flip \overline{\sfH}_{2t-1}$};
        \draw (7,-0.7) rectangle (8.5,-1.2);
        \node at (7.75,-0.95) {$\flip \overline{\sfH}_{2t-1}$};
        \node[draw] at (2.25,-1.55) {$\sfH$};
        \node[draw] at (3.25,-1.55) {$\sfh$};
        \node[draw] at (4.25,-1.55) {$\sfh$};
        \node[draw] at (5.25,-1.55) {$\sfh$};
        \node[draw] at (6.25,-1.55) {$\sfh$};
        \node[draw] at (7.25,-1.55) {$\sfh$};
        \node[draw] at (8.25,-1.55) {$\sfh$};
        \end{tikzpicture}
    \end{minipage}
    \qquad 
    \begin{minipage}[b]{0.45\hsize}
        \begin{tikzpicture}
        \draw[->] (1.6,0.6) -- (1.6,-2.0);
        \node at (1.3,-1.8) {$t$};
        \draw (2,0) rectangle (3.5,0.5);
        \node at (2.75,0.25) {$\overline{\sfH}|_{1-4t}$};
        \draw (4,0) rectangle (5.5,0.5);
        \node at (4.75,0.25) {$\overline{\sfH}|_{1-4t}$};
        \draw (6,0) rectangle (7.5,0.5);
        \node at (6.75,0.25) {$\overline{\sfH}|_{1-4t}$};
        \draw (8.7,0) -- (8,0.0) -- (8,0.5) -- (8.7,0.5);
        \node[draw] at (2.25,-0.35) {$\sfH$};
        \draw (3,-0.1) rectangle (4.5,-0.6);
        \node at (3.75,-0.35) {$\flip \overline{\sfH}_{4t-1}$};
        \draw (5,-0.1) rectangle (6.5,-0.6);
        \node at (5.75,-0.35) {$\flip \overline{\sfH}_{4t-1}$};
        \draw (7,-0.1) rectangle (8.5,-0.6);
        \node at (7.75,-0.35) {$\flip \overline{\sfH}_{4t-1}$};
        \draw (3,-0.7) rectangle (4.5,-1.2);
        \node[draw] at (2.25,-0.95) {$\sfH$};
        \node at (3.75,-0.95) {$\flip \overline{\sfH}_{3-4t}$};
        \draw (5,-0.7) rectangle (6.5,-1.2);
        \node at (5.75,-0.95) {$\flip \overline{\sfH}_{3-4t}$};
        \draw (7,-0.7) rectangle (8.5,-1.2);
        \node at (7.75,-0.95) {$\flip \overline{\sfH}_{3-4t}$};
        \node[draw] at (2.25,-1.55) {$\sfH$};
        \node[draw] at (3.25,-1.55) {$\check{\sfH}$};
        \draw (4,-1.3) rectangle (5.5,-1.8);
        \node at (4.75,-1.55) {$\overline{\sfH}|_{4t-3}$};
        \draw (6,-1.3) rectangle (7.5,-1.8);
        \node at (6.75,-1.55) {$\overline{\sfH}|_{4t-3}$};
        \draw (8.7,-1.3) -- (8,-1.3) -- (8,-1.8) -- (8.7,-1.8);
        \end{tikzpicture}
    \end{minipage}
    \caption{The picture of $\kappa_d^R(\bsfH)$ and $\kappa_d^R(\bsfH) \circ  \check{\kappa}_d^R(\bsfH)$. }
    \label{fig:picture.kappa.bold.tr}
\end{figure}
\subsection{Large-scale Lipschitz homotopy invariance}
As an application of the construction of the truncated Kitaev pump, we prove the invariance under coarse geometric homotopy, in the sense of \cite{mitchenerCoarseHomotopyGroups2020}, of the homotopy groups of the invertible phases. 

For a large-scale Lipschitz map $\varphi \colon \Lambda \to \bR_{\geq 1}$, set 
    \begin{align}
    \begin{split}
    \mathbf{I}_{\varphi }\Lambda&{} \coloneqq \{ (\bm{x},n) \in \Lambda \times \bZ_{\geq 0} \mid 0 \leq n  \leq \varphi (\bm{x}) \}, \\
    \partial_{0}\mathbf{I}_{\varphi }\Lambda &{}\coloneqq \{ (\bm{x},n) \in \Lambda \times \bZ_{\geq 0} \mid n=\varphi(\bm{x})  \}, \\
    \partial_1 \mathbf{I}_{\varphi } \Lambda&{}\coloneqq \Lambda \times \{0\}.
    \end{split}\label{eqn:coarse.homotopy}
    \end{align}
    Both $\partial_0 \mathbf{I}_\varphi\Lambda$ and $\partial_1 \mathbf{I}_\varphi\Lambda$ are large-scale bi-Lipschitz equivalent to $\Lambda$ by the projection $\pr_\Lambda$. For linearly proper large-scale Lipschitz maps $F_0, F_1 \colon \Lambda \to X$ to a metric space $X$, a \emph{large-scale Lipschitz homotopy} connecting them is a linearly proper large-scale Lipschitz map $\widetilde{F} \colon \mathbf{I}_{\varphi}\Lambda \to X$ such that $\widetilde{F} \circ \iota_i = F_i$, where $\iota_i \colon \Lambda \to \partial_i \mathbf{I}_{\varphi}\Lambda$ is the canonical inclusions, for $i =0,1$.

\begin{prp}\label{prp:coarse.homotopy}
    Let $\Lambda \in \fL_{d}$ and let $\varphi \colon \Lambda \to \bR_{\geq 1}$ be a linearly proper large-scale Lipschitz map (e.g.\ $\varphi(\bm{x}) = 1 +\rmd(\bm{x},\bm{x}_0)$). 
    Then, for any $\bsfH=(\sfH, \check{\sfH},\overline{\sfH}) \in \bsIP_d(\sM)$ supported on $\Lambda$, there is $r>0$ and a smooth homotopy of IG UAL Hamiltonians supported on $I_{2 \varphi + r}\Lambda$ connecting $\bsfH$ and another smooth family $\bsfH' \in \bsIP_d(\sM)$ that is supported on $\mathbf{I}_{2\varphi +r}\Lambda \setminus \mathbf{I}_{\varphi}\Lambda$. 
\end{prp}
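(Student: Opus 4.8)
The plan is to use the truncated Kitaev pump from \cref{lem:truncated.pump} together with the adiabatic interpolation machinery of \cref{thm:interpolation.loop}, following the pattern of an Eilenberg-swindle-type construction over the ``virtual direction'' $\bZ_{\geq 0}$ added by the pump. Recall that $\boldsymbol{\kappa}_d^R(\bsfH)$ is a path in $\bsIP_{d+1}$ from $\sfh^{\boxtimes \bZ_{\geq 0}}$ to $\bsfH \boxtimes \sfh^{\boxtimes \bZ_{\geq 1}}$, supported on $\Lambda \times \bZ_{\geq 0}$. Applying $\boldsymbol{\kappa}_d^R$ again to the pumped loop (or using the double pump / periodicity structure) and then restricting the ``new'' virtual coordinate via a cutoff function adapted to $\varphi$ should produce a family supported on the truncated cone $\mathbf{I}_{2\varphi+r}\Lambda$. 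The key point is that a truncated pump creates an invertible gapped Hamiltonian on a half-space that, being built out of the null-homotopy $\overline{\sfH}$, is itself null-homotopic on that half-space; this is exactly the analytic content encoded in \cref{lem:truncated.pump}(3) combined with the fact that the Kitaev pump is a weak equivalence into the loop space.

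More concretely, I would proceed as follows. First, fix $\Lambda$ and $\varphi$; without loss of generality $\varphi(\bm{x}) = 1 + \rmd(\bm{x},\bm{x}_0)$, since any linearly proper large-scale Lipschitz $\varphi \geq 1$ is comparable to this one up to affine rescaling, and rescaling the $\bZ_{\geq 0}$-coordinate is a large-scale bi-Lipschitz operation that does not leave $\fL_d$. Second, form the truncated pump $\boldsymbol{\kappa}_d^R(\bsfH)$, a path supported on $\Lambda \times \bZ_{\geq 0}$; its terminal value is $\bsfH \boxtimes \sfh^{\boxtimes \bZ_{\geq 1}}$, which sits on $\Lambda \times \{0\}$ up to trivial tails. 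Third, identify $\Lambda \times \bZ_{\geq 0}$ with $\mathbf{I}_{\infty}\Lambda$ and cut off the virtual coordinate by replacing the layer index $n$ with a truncation at height $\asymp 2\varphi(\bm{x})$: the resulting Hamiltonian agrees with $\boldsymbol{\kappa}_d^R(\bsfH)$ on the region $\{n \leq \varphi(\bm{x})\}$, and because $\boldsymbol{\kappa}_d^R(\bsfH)|_{\{n \text{ large}\}}$ is asymptotically $\sfh$ (this is where \cref{lem:truncated.pump}(1),(2) and the exponential decay of $\overline{\sfH}$ through the pump enter), the truncated family differs from the untruncated one only by an operator whose local generators decay like $f_{\nu,\mu}(\mathrm{dist}(\bm{x}, \partial))$ — so \cref{thm:interpolation.loop}, or rather \cref{lem:cut.diffused} applied to the linearly coarsely transverse decomposition of $\mathbf{I}_{2\varphi+r}\Lambda$ into $\mathbf{I}_{\varphi}\Lambda$ and its complement, provides a smooth homotopy of gapped UAL Hamiltonians between them. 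Fourth, reading off the endpoints: at one end we recover $\bsfH$ (living on $\Lambda \times \{0\}$, i.e.\ $\partial_1 \mathbf{I}\Lambda$, which is large-scale bi-Lipschitz to $\Lambda$), and at the other end we obtain a family $\bsfH'$ supported on the annular region $\mathbf{I}_{2\varphi+r}\Lambda \setminus \mathbf{I}_{\varphi}\Lambda$, because the pump is built so that the nontrivial part of $\boldsymbol{\kappa}_d^R(\bsfH)$ is pushed outward under the second application and the inner cone $\mathbf{I}_{\varphi}\Lambda$ carries only $\sfh$ after the swindle. The invertibility data $(\check{\sfH}',\overline{\sfH}')$ is transported along in the same way using \cref{lem:bold.invertible.sheaf} and \cref{lem:Hmonoid.invertible}(1), since all constructions preserve pointwise invertibility.

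The main obstacle is the analytic bookkeeping of the cutoff step: one must verify that truncating the virtual $\bZ_{\geq 0}$-coordinate of $\boldsymbol{\kappa}_d^R(\bsfH)$ at height $\asymp 2\varphi(\bm{x})$ produces only a perturbation satisfying the hypotheses of \cref{lem:cut.diffused} (summable operator-norm difference $<1/4$, ground-state perturbation $<2$, and finiteness of the constant $K_{\sfH-\sfH',\nu,\mu}^{(k)}$ of \eqref{eqn:constant.K}) \emph{uniformly} in $p \in \sM$ over relatively compact charts. This requires knowing that the local generators of $\boldsymbol{\kappa}_d^R(\bsfH)$ at a point $(\bm{x},n)$ with $n$ large are close to $\sfh_{(\bm{x},n)}$ with a rate controlled by a function in $\cF$ of $n$ — equivalently, that the ``defect'' produced by the truncated pump is localized near the cone boundary $\partial_0\mathbf{I}\Lambda$ with decay governed by $\mathrm{dist}(\cdot,\partial_0)$. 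Since $\partial_0\mathbf{I}_{2\varphi+r}\Lambda$ and $\mathbf{I}_{\varphi}\Lambda$ form a linearly coarsely transverse pair for suitable $r$ (which is why $\varphi$ must be linearly proper — the slab $\mathrm{pr}_\Lambda^{-1}([\varphi, 2\varphi+r])$ has linearly growing diameter), the decay estimates from \cref{subsection:adiabatic.interpolation}, especially \cref{rmk:cutoff.approximate} and \cref{lem:LGA.cone.decomposition}, close the argument. The choice of the multiplier $2$ (and the slack $r$) is forced precisely so that this transversality, and hence the convex-combination homotopy of \cref{lem:cut.diffused}, applies.
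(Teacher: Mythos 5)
Your overall picture --- an Eilenberg-swindle via the truncated Kitaev pump and a cutoff at height controlled by $2\varphi+r$ --- is in the right spirit, but your step three rests on a false assertion and omits the step that actually makes the proof work. The claim that ``$\boldsymbol{\kappa}_d^R(\bsfH)|_{\{n\text{ large}\}}$ is asymptotically $\sfh$'' does not hold at intermediate times: by \cref{lem:truncated.pump}(2) the truncated pump agrees with the untruncated one for $n\ge 2$, and $\kappa_d(\bsfH)$ is vertically $2$-periodic, so for fixed $t\in(0,1)$ the local generator at $(\bm{x},n)$ for all large $n$ is $(\overline{\sfH}|_{1-2t})_{\bm{x}}$ or its flip --- at $t=1/2$ it equals $\sfH_{\bm{x}}$ or $\check{\sfH}_{\bm{x}}$, nowhere near $\sfh$. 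Thus a naive layer-cutoff of the path $t\mapsto\kappa_d^R(\bsfH)(\cdot,t)$ at height $\asymp 2\varphi(\bm{x})$ is not a small perturbation in the almost-local norm, and it does not even stay gapped along the path: for $t$ slightly below $1/2$ the ground state is entangled across dimers $(2n,2n+1)$, for $t$ slightly above it is entangled across $(2n+1,2n+2)$, so no fixed layer boundary admits a factorized ground state uniformly in $t$, and neither \cref{prp:cut.trivial.Hamiltonian} nor \cref{lem:cut.diffused} applies to the cutoff as you describe it.

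What makes the truncation legitimate is applying the adiabatic interpolation $\vartheta$ of \cref{thm:interpolation.loop} \emph{first}, with $Y=(\Lambda\times\bZ_{\geq 0})\setminus\mathbf{I}_{2\varphi+r}\Lambda$. This is precisely the step you relegate to ``\cref{thm:interpolation.loop}, or rather \cref{lem:cut.diffused}'', but it is what converts the time-direction of the pump into spatial decay: since $\kappa_d^R(\bsfH)(p,0)=\sfh$, condition (2) of \cref{thm:interpolation.loop} gives $K^{(k)}_{Y,\,\vartheta(\boldsymbol{\kappa}_d^R(\bsfH))-\sfh,\,\nu,\mu}<\infty$, i.e., $\vartheta(\boldsymbol{\kappa}_d^R(\bsfH))$ is a single Hamiltonian that is $\cF$-close to $\sfh$ away from the cone boundary $Y$. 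Only after this does $\Theta_{\mathbf{I}_{2\varphi+r}\Lambda\setminus\mathbf{I}_\varphi\Lambda}$ produce $\bsfH'$ on the annulus, and the homotopy back to $\bsfH$ is the concatenation of the convex combination from \cref{lem:cut.diffused} with the path $t\mapsto\tilde{\vartheta}\kappa_d^R(\bsfH)(p,t)$, whose endpoint at $t=1$ is $\bsfH\boxtimes\sfh^{\boxtimes\bZ_{\geq 1}}$. There is no ``second application of the pump'' in the argument; that notion, and the periodicity remark, have no counterpart in what is actually needed.
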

\begin{proof}
    Apply \cref{thm:interpolation.loop,lem:cut.diffused} to the truncated Kitaev pump $\kappa_d^R(\sfH, \check{\sfH},\overline{\sfH})$, the subspaces $Y \coloneqq (\Lambda \times \bZ_{\geq 0}) \setminus \mathbf{I}_{2\varphi +r} \Lambda$ and $Z = \mathbf{I}_{\varphi}\Lambda \setminus \mathbf{I}_{\varphi}\Lambda$. 
    Then, the resulting smooth family 
    \[
    \Theta_{\mathbf{I}_{2\varphi +r}\Lambda \setminus \mathbf{I}_{\varphi}\Lambda , \omega_0} \circ \vartheta (\boldsymbol{\kappa}_d^R(\bsfH)) \in \bsIP_d(\sM)
    \]
    is supported on $\mathbf{I}_{2\varphi +r} \Lambda \setminus \mathbf{I}_{\varphi}\Lambda$, and is smoothly homotopic to $\sfH$ as sections of $\bsIP_d$ supported on $\mathbf{I}_{2\varphi+r}\Lambda$ by the concatenation of the convex combination homotopy $\Theta_{\mathbf{I}_{2\varphi +r}\Lambda \setminus \mathbf{I}_{\varphi}\Lambda , \omega_0} \circ \vartheta (\boldsymbol{\kappa}_d^R(\bsfH))  \simeq \vartheta (\boldsymbol{\kappa}_d^R(\bsfH )) $ and 
    $\tilde{\vartheta}\kappa_d^R(\bsfH)$ in \cref{thm:interpolation.loop}.    
\end{proof}

\begin{cor}\label{cor:coarse.homotopy.Hamiltonian}
    Let $F_0,F_1 \colon \Lambda \to \bR^{d+l}$ be weakly uniformly discrete maps such that $ \pr_{\bR^d} \circ F_i$ is linearly proper. 
    Assume that $F_0$ and $F_1$ are large-scale Lipschitz homotopic. 
    Then, for any $\sfH \in \sIP_d(\Lambda \midbar \sM)$, the images $F_{0,*}(\sfH)$ and $F_{1,*}(\sfH)$ are smoothly homotopic in $\sIP_d(\sM)$. 
\end{cor}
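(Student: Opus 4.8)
The plan is to reduce the corollary to \cref{prp:coarse.homotopy} by embedding everything into a single lattice on which the large-scale Lipschitz homotopy is visible. First I would set $X \coloneqq \bR^{d+l}$ and let $\widetilde{F} \colon \mathbf{I}_{\varphi}\Lambda \to X$ be the given large-scale Lipschitz homotopy connecting $F_0$ and $F_1$, where $\varphi \colon \Lambda \to \bR_{\geq 1}$ is the associated linearly proper large-scale Lipschitz function. The key point is that $\mathbf{I}_{\varphi}\Lambda$, equipped with the lift of $\widetilde{F}$ into $\bR^{d+l+1} = \bR^d \times \bR^{l+1}$ (adding the $\bZ_{\geq 0}$-coordinate as an extra internal direction and using the internal degrees of freedom pulled back from $\Lambda$ via $\pr_{\Lambda}$), is again an object of $\fL_d$: weak uniform discreteness follows from that of $\widetilde{F}$ together with properness of $\varphi$, and linear properness of $\pr_{\bR^d}$ follows from linear properness of $\pr_{\bR^d} \circ F_i$ and of $\varphi$ (the composition of linearly proper maps is linearly proper, as remarked after \cref{defn:polynomially.proper}). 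Denote this lattice $\Lambda_{\varphi}$.

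Next, I would push $\sfH$ forward along the inclusion $\iota_1 \colon \Lambda \to \partial_1 \mathbf{I}_{\varphi}\Lambda = \Lambda \times \{0\} \hookrightarrow \Lambda_{\varphi}$, using \cref{lem:induced.map.GP} (after replacing $\sfH$ with a refinement $\bsfH$ via \cref{lem:forget.invertible}, so that \cref{prp:coarse.homotopy} applies). By \cref{prp:coarse.homotopy} applied to $\bsfH$ on $\Lambda_{\varphi}$ — rescaling $\varphi$ as needed — there is a smooth homotopy in $\bsIP_d$ connecting $\iota_{1,*}\bsfH$ to a section $\bsfH'$ supported on the complement $\mathbf{I}_{2\varphi + r}\Lambda \setminus \mathbf{I}_{\varphi}\Lambda$, which deformation-retracts in the large-scale Lipschitz sense onto $\partial_0 \mathbf{I}_{\varphi}\Lambda \cong \Lambda$. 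Composing $\widetilde{F}$ with this retraction, and using \cref{lem:flip.homotopy.use} to see that near maps induce homotopic morphisms, I would identify $\bsfH'$ up to smooth homotopy with $\widetilde{F}_* \bsfH$ restricted over $\partial_0 \mathbf{I}_{\varphi}\Lambda$, i.e.\ with $F_{0,*}\bsfH$. Running the same argument with the roles of $\partial_0$ and $\partial_1$ exchanged — or more directly, noting $\widetilde{F} \circ \iota_0 = F_0$ and $\widetilde{F} \circ \iota_1 = F_1$ and that $\iota_{0,*}$ and $\iota_{1,*}$ are connected by the homotopy supported on $\mathbf{I}_{2\varphi+r}\Lambda$ produced by \cref{prp:coarse.homotopy} — yields $F_{0,*}\bsfH \simeq F_{1,*}\bsfH$. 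Applying the forgetful weak equivalence $\mathsf{fg}$ of \cref{lem:forget.invertible} gives the statement for $\sfH$ itself.

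The main obstacle I anticipate is the bookkeeping in the middle step: matching the support-shrunk section $\bsfH'$ with the honest pushforward $\widetilde{F}_*\bsfH$. \cref{prp:coarse.homotopy} only guarantees that $\bsfH'$ is \emph{supported} on the annular region $\mathbf{I}_{2\varphi+r}\Lambda \setminus \mathbf{I}_{\varphi}\Lambda$; to conclude it agrees up to homotopy with $F_{1,*}\bsfH$ one must compose with a large-scale Lipschitz map collapsing that annulus onto $\partial_0\mathbf{I}_{\varphi}\Lambda$ and invoke \cref{lem:flip.homotopy.use} to absorb the resulting near-map ambiguity, plus check that this collapsing map composed with $\widetilde{F}$ is near $F_1 \circ \pr_{\Lambda}$. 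This is routine given the tools assembled in \cref{subsection:adiabatic.interpolation} and \cref{subsection:sheaf.lattice}, but it is where all the hypotheses (linear properness of both $\pr_{\bR^d}\circ F_i$ and $\varphi$, so that $\Lambda_{\varphi} \in \fL_d$) are actually consumed. Everything else — the existence of $\bsfH$ refining $\sfH$, and the passage back through $\mathsf{fg}$ — is immediate from the cited lemmas.
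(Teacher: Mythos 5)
Your opening move — lifting the coarse homotopy $\widetilde{F}$ to a map into $\bR^{d+l+1}$, checking the resulting lattice lies in $\fL_d$, and invoking \cref{prp:coarse.homotopy} — is the right start and matches the paper. But the step where you recover $F_{i,*}\bsfH$ from $\bsfH'$ has a genuine gap, and it is exactly the step you flagged as an "obstacle" and then declared routine.

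You propose to "identify $\bsfH'$ up to smooth homotopy with $\widetilde{F}_*\bsfH$ restricted over $\partial_0\mathbf{I}_\varphi\Lambda$" by composing $\widetilde{F}$ with a retraction of the annulus and absorbing the error with \cref{lem:flip.homotopy.use}. This does not parse: $\bsfH'$ is produced by the truncated Kitaev pump and adiabatic interpolation inside the \emph{abstract} cylinder $\mathbf{I}_{2\varphi+r}\Lambda$, and is not the pushforward of $\bsfH$ along any relabeling of sites. So there is no retraction $\rho$ with $\bsfH'=\rho_*(\text{something})$, and "restricting $\widetilde{F}_*\bsfH$ over $\partial_0$" does not describe $\bsfH'$. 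Your alternative phrasing — that "$\iota_{0,*}$ and $\iota_{1,*}$ are connected by the homotopy ... produced by \cref{prp:coarse.homotopy}" — is also false: that proposition connects $\bsfH$ to a section $\bsfH'$ supported on the annulus, and $\bsfH'\neq\iota_{0,*}\bsfH$. Relatedly, "applying \cref{prp:coarse.homotopy} on $\Lambda_\varphi$" conflates lattices, since the proposition lives in a fresh coordinate cylinder over the original $\Lambda$, not inside the image $\widetilde{F}'(\mathbf{I}_\varphi\Lambda)$.

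The missing move — and the actual content of the paper's proof — is to push the \emph{entire} \cref{prp:coarse.homotopy} homotopy forward by two maps on $\mathbf{I}_{2\varphi+r}\Lambda$ that are \emph{identically equal} on the support of $\bsfH'$: the stabilized lift $\widetilde{F}'(\bm{x},n) \coloneqq (\widetilde{F}(\bm{x},\min\{n,\varphi(\bm{x})\}),n)$ and the constant-in-$n$ map $F_1'(\bm{x},n) \coloneqq (F_1(\bm{x}),n)$. Since $\widetilde{F}' = F_1'$ on $\mathbf{I}_{2\varphi+r}\Lambda\setminus\mathbf{I}_\varphi\Lambda$, one gets $\widetilde{F}_*'\bsfH' = F_{1,*}'\bsfH'$ \emph{exactly}, with no near-map homotopy via \cref{lem:flip.homotopy.use}. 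Chaining the two pushforwards of the same homotopy gives
\[
F_{0,*}\bsfH = \widetilde{F}_*'(\bsfH\boxtimes\sfh) \simeq \widetilde{F}_*'\bsfH' = F_{1,*}'\bsfH' \simeq F_{1,*}'(\bsfH\boxtimes\sfh) = F_{1,*}\bsfH .
\]
The collapse-and-approximate strategy you sketch does not substitute for this exact agreement of pushforwards, and without it the proof does not close.
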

\begin{proof}
    Let $\widetilde{F} \colon \mathbf{I}_{\varphi}\Lambda  \to \bR^{d+l}$ be a large-scale Lipschitz homotopy connecting $F_0$ and $F_1$. By definition, the composition $\pr_{\bR^d} \circ \widetilde{F}$ is linearly proper. 
    We may assume that $r>0$ is sufficiently large so that \cref{prp:coarse.homotopy} holds for $\sfH$. 
    We modify $\widetilde{F}$ to extend the domain as 
    \begin{gather*}
    \widetilde{F}' \colon \mathbf{I}_{2\varphi +r }\Lambda \to \bR^{d+l} \times \bR = \bR^{d+l+1}, \quad 
    \widetilde{F}'(\bm{x},n) = (\widetilde{F}(\bm{x},\min \{ n , \varphi(\bm{x}) \} ),n).
    \end{gather*}
    Similarly, we also define $F'_1 \colon \mathbf{I}_{2\varphi+r}\Lambda \to \bR^{d+l+1}$ by $F_1' (\bm{x},n) \coloneqq (F_1(\bm{x}),n)$. 
    Then, both $\widetilde{F}'$ and $F_1'$ are weakly uniformly discrete. Indeed, the restrictions of $\widetilde{F}'$ to $\pr_{\bZ_{\geq 0}}^{-1}(\{n\})$ is near to $F_0$ and hence is weakly uniformly discrete. 
    Moreover, we have $\widetilde{F}' = F_1'$ on $\mathbf{I}_{2\varphi +r}\Lambda \setminus \mathbf{I}_{\varphi}\Lambda$. 
    By \cref{prp:coarse.homotopy}, we get the smooth homotopy 
    \begin{align*}
    F_{0,*}(\sfH) =\widetilde{F}_*'(\sfH \boxtimes \sfh) \simeq \widetilde{F}_*'(\sfH') = F_{1,*}'(\sfH') \simeq F_{1,*}'(\sfH \boxtimes \sfh) = F_{1,*}(\sfH). 
    \end{align*}
    This proves the claim since both $\widetilde{F}'$ and $F'_1$ are weakly uniformly discrete and are linearly proper after composition with $\pr_{\bR^d}$. 
\end{proof}

\begin{exmp}\label{exmp:Lipschitz.homotopy}
    The following maps are examples of large-scale Lipschitz homotopies.
\begin{enumerate}
    \item By replacing $\Lambda$ and $\bZ_{\geq 0}$ in \eqref{eqn:coarse.homotopy} with $X$ and $\bR_{\geq 0}$, the space $\mathbf{I}_\varphi X$ and the notion of (linearly proper) large-scale Lipschitz homotopy is defined for maps whose domain is a non-discrete metric space $X$. 
    Let $F \colon \mathbf{I}_{\varphi}\bR^{d_1} \to \bR^{d_2}$ be a large-scale Lipschitz homotopy. Then, for any $\Lambda \in \fL_d$, the restriction 
    \[
    F|_{\Lambda} \coloneqq (F \times \id)|_{\Lambda} \colon \Lambda \to \bR^{d_2} \times \bR^\infty \times \fR \times \bN
    \]
    is linearly proper after composition with $\pr_{\bR^d}$. We introduce two examples of such maps. Here, the reference point is chosen as $\bm{x}_0 = \bm{0} \in \bR^{d+l}$.  
\begin{itemize}
    \item For $0<\lambda <1$ and $\varphi (\bm{x}) = 1+\| \bm{x} \|$, the map
    \begin{align*}
    \widetilde{F}_{\lambda } \colon \mathbf{I}_{\varphi } \bR^{d+l} \to \bR^{d+l}, \quad \widetilde{F}_{\lambda}(\bm{x},v) = \big(1 + (1- \lambda )(1+\| \bm{x}\|)^{-1}v \big) \cdot \bm{x}
    \end{align*} 
    is a continuous large-scale Lipschitz homotopy connecting the scaling map $\widetilde{F}_{\lambda}(\bm{x} , \varphi(\bm{x}))=\lambda \bm{x}$ and the identity map $\widetilde{F}_\lambda(\bm{x},0)=\bm{x}$. 
    \item Let $U = e^{T} \in SO(d+l)$, where $T$ is a real antisymmetric matrix. Then, the continuous large-scale Lipschitz map 
    \begin{align*}
    F_{T} \colon \mathbf{I}_{\varphi} \bR^{d+l} \to \bR^{d+l}, \quad F_{\lambda}(\bm{x},v) = e^{v (1+\| \bm{x}\|)^{-1} T} \cdot \bm{x}
    \end{align*} 
    connects the rotation by $U$ to the identity. 
\end{itemize}
    \item Let $F_0,F_1 \colon \Lambda \to \bR^{d + l}$ be large-scale Lipschitz maps such that $\pr_{\bR^d} \circ F_i$ are linearly proper and $\pr_{\bR^d} \circ F_0 $ and $ \pr_{\bR^d}  \circ F_1$ are near. 
    Then there is $m>0$ such that $\rmd(F_0(\bm{x}),F_1(\bm{x} )) \leq \varphi(\bm{x}) \coloneqq m(1+\rmd (\bm{x} , \bm{x}_0))$. Hence $F_0$ and $F_1$ are large-scale Lipschitz homotopic by $\widetilde{F} \colon \mathbf{I}_{\varphi}\Lambda  \to \bR^{d+l}$ given by
\begin{align*} 
    \widetilde{F}(\bm{x},n) = \frac{\varphi(\bm{x})-n}{\varphi(\bm{x})} \cdot F_0(\bm{x}) + \frac{n}{\varphi(\bm{x})} \cdot F_1(\bm{x}).
\end{align*} 
    This $\widetilde{F}$ satisfies that $(\pr_{\bR^{d}} \circ \widetilde{F})^{-1}(\bm{x},n) \subset (\pr_{\bR^{d}} \circ F_i)^{-1}(\bm{x}) \times [0,\varphi(\bm{x})]$, and hence $\pr_{\bR^{d}} \circ \widetilde{F}$ is linearly proper. 
\end{enumerate}
\end{exmp}

\subsection{A bulk-boundary correspondence}\label{subsection:bulk.boundary}
Next, we prove that the Kitaev pump is a weak equivalence. A key ingredient of the proof is the following version of bulk--boundary correspondence.
Here, we use the conical regions $Z_{L}^{\theta_L}$ and $Z_{R}^{\theta_R}$ introduced in \eqref{eqn:conical.region} and \cref{fig:transverse.cone}, especially the following subspaces; 
\begin{align*}
    Y_L \coloneqq {}&{} Z_R^{0} = \{ \bm{x} \in \bR^d \mid x_d \leq 0 \}, \\
    Y_R \coloneqq {}&{} Z_R^{0} = \{ \bm{x} \in \bR^d \mid x_d \geq 0 \}, \\
    Z_{L,r} \coloneqq {}&{} Z_L^{\pi/4} \setminus N_r(Y_R) =\{ \bm{x} \in \bR^d \mid x_d \leq \min \{ -r,  -\| \bm{x}\| /\sqrt{2} \} \}, \\
    Z_{R,r} \coloneqq {}&{}Z_R^{\pi/4} \setminus N_r(Y_L) = \{ \bm{x} \in \bR^d \mid x_d \geq \max \{r, \| \bm{x} \| /\sqrt{2} \} \}.
\end{align*}
For $X \subset \bR^d$, we use the same letter $X$ for the inverse images $\pr_{\bR^d}^{-1}(X)\subset \Lambda$.

\begin{thm}\label{prp:Eilenberg.swindle}
    The subsheaf $\bsIP_{d}(\blank )_{R}$ (resp.\ $\bsIP_{d}(\blank)_L$) of $\bsIP_{d}$ consisting of triples $(\sfH,\check{\sfH},\overline{\sfH})$ supported on $Y_R$ (resp.\ $Y_L$) is weakly contractible.   
\end{thm}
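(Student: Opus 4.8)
\textbf{Proof proposal for \cref{prp:Eilenberg.swindle}.}

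The plan is to run an Eilenberg swindle that trivializes every section of $\bsIP_d(\sM)_R$ (the argument for $\bsIP_d(\sM)_L$ being symmetric), exploiting the fact that the relaxed notion of lattice (\cref{defn:lattice.set}) lets us stack infinitely many copies of a quantum system ``into the virtual dimension'' over a single point of $\bR^d$. Concretely, I would show that the inclusion $\mathbf 1 \hookrightarrow \bsIP_d(\blank)_R$ is a weak equivalence by verifying that for every $\sfH \in \bsIP_d(\sM)_R$ supported on some $\Lambda$ contained in $Y_R$, there is a smooth homotopy connecting $\sfH$ to the trivial section $\sfh$, relative to $\mathbf 1$ on any given closed subset where $\sfH$ is already trivial, and that these homotopies can be made coherent enough to conclude $\pi_n(\bsIP_d(\blank)_R) = 0$ for all $n$ via \cref{lem:Hmonoid.invertible} and \cref{thm:MW}.

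The key steps, in order. First I would form the ``infinite repetition'' $\sfH^{\boxtimes \bN}$: using that $\Lambda \subset Y_R \subset \bR^{d+l_\Lambda}$ and that the virtual-dimension coordinate $\bN$ is unbounded, embed $\bigsqcup_{k \in \bN} \Lambda$ into $\bR^{d + l_\Lambda + 1} \times \fR \times \bN$ by translating the $k$-th copy along a fresh Euclidean coordinate (so the copies are uniformly discrete with linear growth) while keeping the $\bR^d$-projection fixed; linear properness of $\pr_{\bR^d}$ is preserved because we only spread copies in the virtual directions, not in $\bR^d$. This produces a lattice $\Lambda_\infty \in \fL_d$ still supported over $Y_R$, and $\sfH^{\boxtimes \bN} \in \bsIP_d(\sM)_R$ is a well-defined smooth family (its time evolution and ground state are infinite tensor products of those of $\sfH$, hence gapped, and invertibility is inherited factorwise using $\check\sfH^{\boxtimes \bN}$). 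Second, the swindle identity: by the standard ``$\infty + 1 = \infty$'' reassociation, $\sfH \boxtimes \sfH^{\boxtimes \bN}$ is isomorphic (by a near bijection of lattices, so homotopic by \cref{lem:flip.homotopy.use}) to $\sfH^{\boxtimes \bN}$, hence $\sfH \boxtimes \sfH^{\boxtimes \bN} \simeq \sfH^{\boxtimes \bN}$ in $\bsIP_d(\blank)_R$. Third, combine with the null-homotopy $\overline{\sfH}$ of $\sfH \boxtimes \check\sfH$ and the commutativity/associativity homotopies of \cref{lem:H-monoid}: writing $\sfH^{\boxtimes \bN} \boxtimes \check\sfH^{\boxtimes \bN} \simeq (\sfH \boxtimes \check\sfH)^{\boxtimes \bN} \simeq \sfh^{\boxtimes \bN} \simeq \sfh$, one telescopes
\[
 \sfH \simeq \sfH \boxtimes \sfH^{\boxtimes \bN} \boxtimes \check\sfH^{\boxtimes \bN} \simeq \sfH^{\boxtimes \bN} \boxtimes \check\sfH^{\boxtimes \bN} \simeq \sfh,
\]
all homotopies taking place through sections supported over $Y_R$. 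Finally, run the same argument for an arbitrary $\sfM$ and for relative homotopy classes, using the local commutative H-monoid structure (\cref{rmk:local.Hmonoid.structural.homotopy}) to glue the pointwise swindles over an open cover with a partition of unity, exactly as in the proof of \cref{lem:H-monoid}; this gives $\bsIP_d[\sM,\sA]=0$ for all $(\sM,\sA)$, hence weak contractibility by \cref{thm:MW}.

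The main obstacle is the bookkeeping in Step 1: one must arrange the embedding of $\bigsqcup_k \Lambda$ so that the result genuinely lies in $\fL_d$ (weak uniform discreteness with polynomial growth \emph{and} linear properness of $\pr_{\bR^d}$) \emph{uniformly in the parameter $p \in \sM$}, and so that $\sfH^{\boxtimes \bN}$ is smooth in the sense of \cref{defn:smooth} --- i.e.\ the $C^k$-seminorms $\vvert \blank \vvert_{\sU,C^k,\nu,\mu}$ of the infinite tensor product are controlled. The latter is where having the copies spread out along a new Euclidean axis (rather than merely along $\bN$) is essential: it forces $\rmd((\bm x, k),(\bm x, k'))$ to grow with $|k-k'|$, so the local generators of the $k$-th copy sit at mutually large distances and the relevant seminorms of $\sfH^{\boxtimes \bN}_{(\bm x,k)} = \iota_{k,*}(\sfH_{\bm x})$ are just $\sup_k \vvert \sfH \vvert$, finite. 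A secondary subtlety is that the swindle homotopies of Steps 2--3 use the H-monoid coherence only \emph{locally} (\cref{defn:local.commutative.Hmonoid}), so one should phrase everything through the colimit $\colim_{\Lambda} \sGP(\Lambda \midbar \blank)$ and invoke \cref{rmk:Ainfty} to suppress the ambiguity in how the countably many $\boxtimes$'s are parenthesized and how $\fm$ is iterated; I would handle this by fixing once and for all an injection $\bN \times \bN \hookrightarrow \bN$ compatible with $\fm$ and checking that different choices give homotopic results, as in \cref{exmp:Lipschitz.homotopy}.
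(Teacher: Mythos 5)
Your Steps 2--3 have the right shape, but Step 1 contains a concrete error that breaks the argument, and it is precisely the point where the hypothesis ``supported on $Y_R$'' must be used.

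You claim that translating the $k$-th copy of $\Lambda$ along a fresh Euclidean coordinate, \emph{while keeping the $\bR^d$-projection fixed}, ``preserves linear properness of $\pr_{\bR^d}$.'' This is false. With your embedding, every ball $B_r(\bm x_0)\subset\bR^d$ that meets $\pr_{\bR^d}(\Lambda)$ has inverse image under $\pr_{\bR^d}$ meeting \emph{all} countably many copies, which are spread without bound along the new virtual axis; the diameter of that inverse image is therefore infinite, so condition (iii) of \cref{defn:lattice.set} (indeed even properness) fails and $\Lambda_\infty \notin \fL_d$. Consequently $\sfH^{\boxtimes\bN}$ is not a section of $\bsIP_d$, and the telescope does not take place inside the sheaf. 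A symptom of the gap is that your argument never uses $\Lambda\subset Y_R$ in any geometric way --- only as a tag that the support stays in $Y_R$ --- so if it were correct it would show that $\bsIP_d$ itself is weakly contractible, contradicting \cref{cor:spectrum} and \cref{prp:0d}.

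The paper's proof resolves exactly this: rather than stacking the tower perpendicular to $\bR^d$, it \emph{shears} it into the $x_d$-direction via $\widetilde T_d(\bm x, y) = (T_d^y(\bm x), y)$, so the copy at height $n$ gets translated by $n$ along $x_d$. Because $\Lambda\subset Y_R=\{x_d\geq 0\}$, the sheared copy at height $n$ lies in $\{x_d \geq n\}$, hence only finitely many copies meet any given $\pr_{\bR^d}^{-1}(B_r(\bm x_0))$, and the fiber diameter grows linearly in $r$; this is where $Y_R$ is indispensable, and it is what makes $\widetilde\Lambda\in\fL_d$ while remaining supported on $Y_R$. The actual swindle is then implemented by pushing forward the truncated Kitaev pump $\boldsymbol\kappa_d^R(\bsfH)$ of \cref{lem:truncated.pump} along $\widetilde T_{d,*}$: this gives a smooth path in $\cP\bsIP_d(\sM)_R$ from $\sfh$ to $\bsfH\boxtimes\sfh^{\boxtimes\bN}\cong\bsfH$. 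Your idea of reassociating $\sfH\boxtimes\sfH^{\boxtimes\bN}\simeq\sfH^{\boxtimes\bN}$ and then unwinding via $\overline\sfH{}^{\boxtimes\bN}$ is essentially an unpacking of this pump, and would also work once the lattice is sheared into $Y_R$ rather than stacked over a fixed base.
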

This theorem is analogous to a part of the bulk-boundary correspondence in free fermion theory (cf.\ \cite{kubotaControlledTopologicalPhases2017}), which claims that a $d$-dimensional bulk system belongs to the trivial topological phase if and only if, when a boundary is inserted into the system, it admits a boundary condition that preserves the spectral gap.
A consequence of \cref{prp:Eilenberg.swindle} is the following: If an IG UAL Hamiltonian $\sfH$ can be deformed into a boundary-gapped one, i.e., of the form $\sfH_L \boxtimes \sfH_R $, where $\sfH_L$ and $\sfH_R$ are IG UAL Hamiltonians supported on $Y_L$ and $Y_R$, then $\sfH$ belongs to the trivial phase.

The following proof is inspired by the work of Higson--Roe--Yu \cite{higsonCoarseMayerVietorisPrinciple1993} in the K-theory of coarse C*-algebras.

\begin{figure}[t]
    \centering
        \begin{minipage}[b]{0.3\hsize}
        \begin{tikzpicture}
        \draw (4,0.0) -- (0,0.0) -- (0.9375,1.25) -- (4.0,1.25);
        \node at (2.0,0.625) {$\overline{\sfH}_{1-2t}$};
        \draw (4,1.5) -- (1.125,1.5) -- (2.0625,2.75) -- (4.0,2.75);
        \node at (2.8,2.125) {$\overline{\sfH}_{1-2t}$};
        \draw (4,3) -- (2.25,3) -- (3.1875,4.25) -- (4,4.25);
        \node at (3.6,3.675) {$\overline{\sfH}_{1-2t}$};
        \end{tikzpicture}
        \subcaption{The case of $0 \leq t \leq 1/2$.}
    \end{minipage}
        \begin{minipage}[b]{0.3\hsize}
        \begin{tikzpicture}
        \draw (4,0) -- (0.0,0.0) -- (0.375,0.5) -- (4.0,0.5);
        \node at (2.0,0.25) {$\sfH$};
        \draw (4,0.75) -- (0.5625,0.75) -- (0.9375,1.25) -- (4.0,1.25);
        \node at (2.35,1) {$\check{\sfH}$};
        \draw (4,1.5) -- (1.125,1.5) -- (1.5,2) -- (4.0,2);
        \node at (2.7,1.75) {$\sfH$};
        \draw (4,2.25) -- (1.6875,2.25) -- (2.0625,2.75) -- (4.0,2.75);
        \node at (3.05,2.5) {$\check{\sfH}$};
        \draw (4,3) -- (2.25,3) -- (2.625,3.5) -- (4.0,3.5);
        \node at (3.4,3.25) {$\sfH$};
        \draw (4,3.75) -- (2.8125,3.75) -- (3.1875,4.25);
        \end{tikzpicture}
        \subcaption{The case of $t=1/2$.}
    \end{minipage}
        \begin{minipage}[b]{0.3\hsize}
        \begin{tikzpicture}
        \draw (4,0) -- (0.0,0.0) -- (0.375,0.5) -- (4.0,0.5);
        \node at (2.0,0.25) {$\sfH$};
        \draw (4,0.75) -- (0.5625,0.75) -- (1.5,2) -- (4.0,2);
        \node at (2.6,1.375) {$\flip \overline{\sfH}_{2t-1}$};
        \draw (4,2.25) -- (1.6875,2.25) -- (2.625,3.5) -- (4.0,3.5);
        \node at (3.2,2.875) {$\flip \overline{\sfH}_{2t-1}$};
        \draw (4,3.75) -- (2.8125,3.75) -- (3.1875,4.25);
        \end{tikzpicture}
        \subcaption{At $1/2 \leq  t \leq 1$.}
    \end{minipage}
    \caption{An Eilenberg swindle in the proof of \cref{prp:Eilenberg.swindle}.}
    \label{fig:Eilenberg.swindle}
\end{figure}
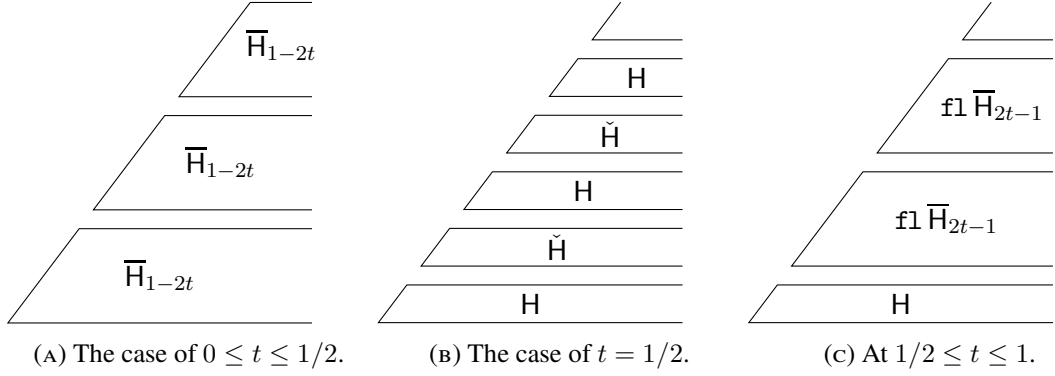

\begin{proof}[Proof of {\cref{prp:Eilenberg.swindle}}]
We show that the set of concordance classes $\bsIP_d[\sM]_R$ is contractible for any compact manifold $\sM \in \Man$. 
Let $\bsfH = (\sfH, \check{\sfH}, \overline{\sfH}) \in \bsIP_d(\sM)_R $. 
Since $\sM$ is compact, there is $\Lambda \in \fL_d$ on which $\bsfH$ is supported in the sense of \cref{eqn:IP.bold}, and $\pr_{\bR^d}(\Lambda)\subset Y_R$.

Let us define $T_d^y \colon \bR^{d+l} \to \bR^{d+l}$ by the shift by $y$ in the $x_d$-direction;
\begin{align*} 
    T_d^y(x_1,\cdots,x_{d+l}) \coloneqq (x_1,\cdots,x_{d-1},x_d+y,x_{d+1},\cdots,x_{d+l}),
\end{align*}
and let
\begin{align*}
    \widetilde{T}_d  \colon   \bR^{d+l} \times \bR_{\geq 0} \to \bR^{d+l} \times \bR_{\geq 0}, \quad \widetilde{T}_d (\bm{x},y) = T_d^y(\bm{x}).
\end{align*}
We define the lattice 
\begin{align*}
    \widetilde{\Lambda} \coloneqq \widetilde{T}_d \bigg( \bigcup_{n \in \bZ_{\geq 0}} (\id \times i_{2n})(\overline{\Lambda}) \cup(\id \times i_{2n})(\overline{\Lambda}) \bigg) \subset \big( \bR^{d} \times \bR^{l} \times \fR \times \bN \big) \times \bR \cong \bR^d \times \bR^{l+1} \times \fR \times \bN.
\end{align*}
By $\Lambda \subset Y_R$, the above $\widetilde{\Lambda} $ satisfies the condition (iii) of \cref{defn:lattice.set} (see also \cref{fig:Eilenberg.swindle}), and hence $\widetilde{\Lambda} \in \fL_d$. 
Therefore, as is illustrated in \cref{fig:Eilenberg.swindle}, the truncated Kitaev pump (\cref{lem:truncated.pump}) on $\widetilde{\Lambda}$ is regarded as a smooth family of IG UAL Hamiltonians
\begin{align*}
    \widetilde{T}_{d,*} \big( \boldsymbol{\kappa}_d^R (\sfH , \check{\sfH}, \overline{\sfH}) \big) \in \cP \bsIP_d( \sM )_R,
\end{align*}
which gives a smooth homotopy 
\begin{align*}
    \bsfH=(\bsfH \boxtimes \sfh \boxtimes \sfh \boxtimes \cdots) \simeq (\sfh \boxtimes \sfh \boxtimes \sfh \boxtimes \cdots) = \sfh.
\end{align*}
This shows that $[\bsfH] = [\sfh] \in \bsIP_d[\sM]_R$. The weak contractibility of $\bsIP_d(\blank)_L$ is proved similarly by reversing the left and the right.  
\end{proof}

\begin{defn}\label{defn:swithcing.IP.sheaf}
We define the following sheaves on $\Man$. 
\begin{enumerate}
    \item Let $\bsIP_{d+1}(\sM)_{R\star }$ denote the subset of $\bsIP_{d+1}(\sM)$ consisting of triples $(\sfH,\check{\sfH},\overline{\sfH})$ such that, for any $k \in \bN$, $f_{\nu,\mu} \in \cF$ and a relatively compact open chart $\sU$ of $\sM$, the following hold for the constants in \eqref{eqn:constant.K}; 
    \[
    K_{Y_R, \sfH-\sfh, \nu,\mu}^{(k)}< \infty, \quad K_{Y_R, \check{\sfH}-\sfh, \nu,\mu}^{(k)}< \infty, \quad  K_{Y_R, \overline{\sfH}-\sfh, \nu,\mu}^{(k)}< \infty.
    \]
    \item Let $\cS \bsIP_{d+1}$ denote the subsheaf of $\cP \bsIP_{d+1}$ whose section on $\sM \in \Man$ consists of smooth families of paths $\sfH \in \cP\sIP_{d+1}(\sM)$ such that $\ev_0 \sfH \in \sIP_{d+1}(\sM)_{R\star} $ and $\ev_1 \sfH \in \sIP_{d+1}(\sM)_L$. 
\end{enumerate}
\end{defn}

\begin{lem}\label{lem:switch.loop.equivalence}
    The inclusion $i \colon \Omega \bsIP_{d+1} \to \cS \bsIP_{d+1}$ is a weak equivalence. 
\end{lem}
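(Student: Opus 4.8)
The plan is to show that the inclusion $i \colon \Omega\bsIP_{d+1} \to \cS\bsIP_{d+1}$ sits inside a chain of weak equivalences realized by explicit morphisms of sheaves, using the contractibility results of the preceding subsection. First I would observe that a section of $\cS\bsIP_{d+1}$ over $\sM$ is a smooth path $\sfH$ of IG UAL Hamiltonians with $\ev_0\sfH$ supported (up to linearly decaying corrections, i.e.\ $K_{Y_R,\cdot,\nu,\mu}^{(k)}<\infty$) essentially on $Y_R$ and $\ev_1\sfH$ supported on $Y_L$. The idea is that such a path is determined, up to contractible choices, by its two endpoints, and that the space of admissible right endpoints (governed by $\bsIP_{d+1}(\blank)_{R\star}$) and the space of admissible left endpoints ($\bsIP_{d+1}(\blank)_L$) are both weakly contractible — the latter by \cref{prp:Eilenberg.swindle}, the former by a variant of the same Eilenberg swindle argument combined with \cref{thm:interpolation.loop} and \cref{lem:cut.diffused} to replace the ``linearly decaying tail'' condition $K_{Y_R,\cdot,\nu,\mu}^{(k)}<\infty$ by genuine support in $Y_R$. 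Concretely, I would first prove a lemma: the natural inclusion $\bsIP_{d+1}(\blank)_R \hookrightarrow \bsIP_{d+1}(\blank)_{R\star}$ is a weak equivalence, by using \cref{rmk:cutoff.approximate} and \cref{thm:interpolation.loop} to cut off the tail of a section of $\bsIP_{d+1}(\blank)_{R\star}$ to obtain one supported on $Y_R$, with the cut-off procedure being smoothly homotopic to the identity via \cref{lem:cut.diffused}. Combined with \cref{prp:Eilenberg.swindle}, this gives that $\bsIP_{d+1}(\blank)_{R\star}$ is weakly contractible.

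Next I would set up the two endpoint-evaluation morphisms. Consider $\ev_0 \colon \cS\bsIP_{d+1} \to \bsIP_{d+1}(\blank)_{R\star}$ and check it is a fibration: given a section of $\cS\bsIP_{d+1}$ over $\sM$ and a smooth homotopy of its $\ev_0$-image in $\bsIP_{d+1}(\blank)_{R\star}$, one lifts by prepending the homotopy to the path and reparametrizing (a routine application of the gluing condition for sheaves on $\Man$, as in the proof of \cref{lem:bold.invertible.sheaf}). The fiber of this fibration over the basepoint $\sfh$ is the subsheaf $\cS_{\sfh}\bsIP_{d+1}$ of paths starting at $\sfh$ and ending in $\bsIP_{d+1}(\blank)_L$. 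Since the base $\bsIP_{d+1}(\blank)_{R\star}$ is weakly contractible, by \cref{lem:survey.homotopy} (2) the inclusion of this fiber into the total space is a weak equivalence. Now evaluate this fiber at the other endpoint: $\ev_1 \colon \cS_{\sfh}\bsIP_{d+1} \to \bsIP_{d+1}(\blank)_L$ is again a fibration (lift by appending and reparametrizing), whose fiber over $\sfh$ is precisely $\Omega\bsIP_{d+1}$, the sheaf of loops at $\sfh$. Since $\bsIP_{d+1}(\blank)_L$ is weakly contractible by \cref{prp:Eilenberg.swindle}, \cref{lem:survey.homotopy} (2) gives that $\Omega\bsIP_{d+1} \hookrightarrow \cS_{\sfh}\bsIP_{d+1}$ is a weak equivalence. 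Composing the two weak equivalences $\Omega\bsIP_{d+1} \hookrightarrow \cS_{\sfh}\bsIP_{d+1} \hookrightarrow \cS\bsIP_{d+1}$ yields that $i$ is a weak equivalence, since this composite is exactly $i$ (both are given by regarding a loop as a path with endpoints lying in the relevant subsheaves — note $\sfh$ is supported on any $Y_R$ or $Y_L$ trivially, so the constant-endpoint conditions are automatically met).

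The main obstacle I anticipate is the first step: verifying that $\bsIP_{d+1}(\blank)_R \hookrightarrow \bsIP_{d+1}(\blank)_{R\star}$ is a weak equivalence, i.e.\ that the ``linearly-decaying-tail toward $Y_R$'' condition on $\ev_0\sfH$ can be rigidified to honest support on $Y_R$ in a way compatible with families and homotopies. This requires carefully applying \cref{thm:interpolation.loop} with $Y$ the complement of a large neighborhood of $Y_R$ (using that $Y_R$ and a suitable conical region form a linearly coarsely transverse pair, cf.\ \cref{defn:linearly.coarsely.transverse} and the discussion around \eqref{eqn:conical.region}), then invoking \cref{lem:cut.diffused} to produce the smooth homotopy between the cut-off Hamiltonian and the original one, and checking that the finiteness of $K_{Y_R,\cdot,\nu,\mu}^{(k)}$ supplies exactly the hypotheses needed (namely $\sum_{\bm{x}}\|\sfH_{\bm{x}}-\sfH'_{\bm{x}}\|<1/4$ and $\|\omega_{\sfH}-\omega_{\sfH'}\|<2$ after rescaling, plus the $C^k$-norm bounds). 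The triple $(\sfH,\check{\sfH},\overline{\sfH})$ must be cut off coherently — in particular the homotopy $\overline{\sfH}$ must be truncated consistently with the truncations of $\sfH$ and $\check{\sfH}$ — which is where one needs to be most careful, but the needed estimates are all contained in \cref{subsection:adiabatic.interpolation}, especially \cref{rmk:cutoff.approximate}.
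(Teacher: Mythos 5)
Your overall strategy matches the paper's: realize $\Omega\bsIP_{d+1} \hookrightarrow \cS\bsIP_{d+1}$ as (iterated) fibers over weakly contractible bases, with contractibility of $\bsIP_{d+1}(\blank)_L$ and $\bsIP_{d+1}(\blank)_{R\star}$ doing the work. The paper uses the single fibration $(\ev_0,\ev_1) \colon \cS\bsIP_{d+1} \to \bsIP_{d+1}(\blank)_{R\star} \times \bsIP_{d+1}(\blank)_L$ where you use two sequential fibrations; that is a cosmetic difference.

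The genuine gap is your proposed proof that $\bsIP_{d+1}(\blank)_R \hookrightarrow \bsIP_{d+1}(\blank)_{R\star}$ is a weak equivalence by truncating an $R\star$-section down to honest support on $Y_R$. \cref{lem:cut.diffused} requires $\sum_{\bm{x}\in\Lambda}\|\sfH_{\bm{x}} - (\Theta_{Y_R}\sfH)_{\bm{x}}\| < 1/4$, but finiteness of $K_{Y_R,\sfH-\sfh,\nu,\mu}^{(k)}$ gives only the pointwise bound $\|\sfH_{\bm{x}} - \sfh_{\bm{x}}\| \le K \cdot f(\mathrm{dist}(\bm{x}, Y_R))$, which does not sum over $Y_L$: every slab $\{\bm{x}\in Y_L : n \le \mathrm{dist}(\bm{x}, Y_R) < n+1\}$ contains infinitely many lattice points, so the sum diverges no matter how fast $f$ decays (this is precisely why the pair $(Y_R, Y_L)$ fails to be linearly coarsely transverse). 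What the paper does instead is truncate to $Z_{L,r}^c$, i.e.\ remove only the conical region $Z_{L,r} = Z_L^{\pi/4}\setminus N_r(Y_R)$. Because $(Y_R,Z_{L,r})$ is linearly coarsely transverse, the slabs $\{\bm{x} \in Z_{L,r} : n \le \mathrm{dist}(\bm{x},Y_R) < n+1\}$ are bounded (of polynomially growing cardinality), so the sum converges by \cref{lem:sum.exponential} and \cref{lem:cut.diffused} applies for $r$ large. The subsheaf of $\bsIP_{d+1}$ of sections supported on $Z_{L,r}^c$ is then weakly contractible by the same shift-to-infinity swindle as \cref{prp:Eilenberg.swindle}, and together these give contractibility of $\bsIP_{d+1}(\blank)_{R\star}$. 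You do gesture at the conical region in your last paragraph, but the target of the truncation must be $Z_{L,r}^c$ and not $Y_R$; as written, your first sub-lemma cannot be established by the route you describe.
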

\begin{proof}
    By definition, the morphism 
    \[
    (\ev_0,\ev_1) \colon \cS \bsIP_{d+1} \to \sIP_{d+1}(\blank)_{R\star} \times \bsIP_{d+1}(\blank)_L
    \]
    is a fibration in the sense of page \pageref{paragraph:fibration}.  By the long exact sequence in \cref{lem:survey.homotopy} (2), it suffices to show that the sheaves $ \bsIP_{d+1}(\blank)_{R\star}$ and $\bsIP_{d+1}(\blank)_L$ are weakly contractible.

    The weak contractibility for $\bsIP_{d+1}(\blank)_L$ is already proved in \cref{prp:Eilenberg.swindle}. 
    Moreover, the same argument as \cref{prp:Eilenberg.swindle} also shows that the subsheaf of $\bsIP_d$ consisting of smooth families $\bsfH$ that is supported on $Z_{L,r}^c$ is also weakly contractible as well.
    By \cref{lem:cut.diffused}, any section $\sfH \in \bsIP_{d+1}(\blank) _{R\star}$ is smoothly homotopic to another Hamiltonian that is supported on $Z_{L,r}^c$ for some $r>0$. They show the weak contractibility of $\bsIP_{d+1}(\blank)_{R\star}$. 
\end{proof}

The inclusion $j\colon \bR^d \to \bR^{d+1}$ induces the morphism of sheaves
\begin{gather*}
    j_* \colon \bsIP_{d} \to \bsIP_{d+1}(\blank)_{L,R\star} \coloneqq \bsIP_{d+1}(\blank)_L \cap \bsIP_{d+1}(\blank)_{R\star} , \\
    \iota \colon \bsIP_{d+1}(\blank)_{L,R\star} \hookrightarrow\cS \bsIP_{d+1}, 
\end{gather*}
where $\iota$ sends $\bsfH \in \sIP_d(\sM)_{L,R\star}$ to the constant path $\iota \bsfH (p,t) = \bsfH(p)$. 

\begin{lem}\label{lem:weak.equivalence.stick}
    The inclusion $j_* \colon \bsIP_{d} \to \bsIP_{d+1}(\blank)_{L, R\star}$ is a weak equivalence. 
\end{lem}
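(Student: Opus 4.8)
The plan is to show that $j_*$ induces a bijection of smooth homotopy classes $\sIP_d[\sM,\sA] \to \sIP_{d+1}[\sM,\sA]_{L,R\star}$ for every compact manifold $\sM$ and closed subset $\sA \subset \sM$; since this includes all smooth homotopy groups, $j_*$ is then a weak equivalence by the definition recalled in \cref{subsubsection:homotopy.sheaf}. It is harmless to pass to the non-refined sheaves: the conditions of being supported on $Y_L$ and of being $\cF$-trivial towards $Y_R$ are stable under $\boxtimes$ and are respected by the structural homotopies of \cref{lem:H-monoid}, so $\sGP_{d+1}(\blank)_{L,R\star}$ is a local commutative H-monoid with a weakly strict unit and the forgetful morphism $\mathsf{fg}\colon \bsIP_{d+1}(\blank)_{L,R\star} \to \sIP_{d+1}(\blank)_{L,R\star}$ is a weak equivalence by the proofs of \cref{lem:forget.invertible,lem:bold.invertible.sheaf}; the same holds for $\bsIP_d$, and $j_*$ commutes with $\mathsf{fg}$. (One may equally keep the triples $(\sfH,\check{\sfH},\overline{\sfH})$ and truncate and compress all three components in parallel in what follows.) Both bijectivity statements will follow from a single ``compression'' operation taking a section of $\sIP_{d+1}(\blank)_{L,R\star}$ over a compact base to one of $\sIP_d$: first one replaces the given Hamiltonian, up to a homotopy inside $\sIP_{d+1}(\blank)_{L,R\star}$, by one literally supported on a bounded-width slab $\{-r \le x_{d+1} \le 0\}$ around the boundary hyperplane; then one uses large-scale Lipschitz homotopy invariance to slide that slab onto the hyperplane $\{x_{d+1}=0\} = j(\bR^d)$.

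\emph{Truncation to a slab.} Fix $\sfH \in \sIP_{d+1}(\sM)_{L,R\star}$, supported on $\Lambda \in \fL_{d+1}$ with $\pr_{\bR^{d+1}}(\Lambda) \subset Y_L$; compactness of $\sM$ makes the constants $K_{Y_R,\sfH-\sfh,\nu,\mu}^{(k)}$ finite uniformly over a finite cover by relatively compact charts, so $\sfH$ is $\cF$-close to $\sfh$, together with all of its $C^k$-derivatives, as $x_{d+1} \to -\infty$. Consequently $\omega_{\sfH}$ is $\cF$-asymptotically equal to $\omega_0$ on $\cA_{\{x_{d+1}\le -r\}}$ for $r$ large. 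The claim is that $\sfH$ is then homotopic, within $\sIP_{d+1}(\blank)_{L,R\star}$, to a smooth family $\sfH'$ supported on the slab $T_r \coloneqq \{\bm x \in \Lambda \mid -r < x_{d+1} \le 0\}$. This is the heart of the argument: the cutoff is produced by combining the adiabatic interpolation of \cref{thm:interpolation.loop} with the ground-state structure of \cref{subsection:GS} (\cref{lem:asymp.equal.state.trivial,lem:asymp.equal.state.invertible,prp:left.equivalent.state.unitary}) and the truncation estimates of \cref{prp:cut.trivial.Hamiltonian,rmk:cutoff.approximate}, in the manner of \cref{prp:coarse.homotopy}; the $R\star$ decay hypothesis is precisely what makes the relevant seminorms and the ground state sufficiently trivial deep in $Y_L$ to allow cutting off there. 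One must also check that the resulting homotopy never leaves $\sIP_{d+1}(\blank)_{L,R\star}$: the Hamiltonians occurring are supported on $Y_L$ (the pieces are, and convex combinations preserve this), and a Hamiltonian supported on $T_r$ automatically has finite $R\star$ constants since $\mathrm{dist}(\cdot,Y_R) \le r$ on $T_r$, so the $R\star$ condition propagates along the interpolation.

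\emph{Compression to the hyperplane.} On $T_r$ the coordinate $x_{d+1}$ ranges over the bounded interval $(-r,0]$. Let $F_0 \colon T_r \hookrightarrow \bR^{d+1}\times\bR^{\infty}\times\fR\times\bN$ be the inclusion and let $F_1 = \widetilde{j}\circ G$, where $\widetilde{j}$ is the lift of $j$ used to define $j_*$ and $G\colon T_r \xrightarrow{\sim} G(T_r)$ is the isometric coordinate shuffle that sets the $x_{d+1}$-coordinate to $0$ and records its old value in a fresh virtual coordinate; here $G(T_r) \in \fL_d$ because the slab has bounded width, so $\pr_{\bR^d}^{-1}(B_\rho(\bm x')) \cap T_r$ is contained in $\pr_{\bR^{d+1}}^{-1}(B_{\rho+r}(\bm x',0)) \cap \Lambda$, of diameter $\le m(1+\rho+r)$, whence $G_*\sfH' \in \sIP_d(\sM)$ by \cref{lem:induced.map.GP}. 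Since $|x_{d+1}| \le r$ on $T_r$, the maps $F_0,F_1$ are near and $\pr_{\bR^{d+1}}\circ F_0,\pr_{\bR^{d+1}}\circ F_1$ are near and linearly proper, so by \cref{exmp:Lipschitz.homotopy} they are large-scale Lipschitz homotopic through a homotopy whose image lies in the slab $\{-r \le x_{d+1}\le 0\}\subset Y_L$, and \cref{cor:coarse.homotopy.Hamiltonian} gives $\sfH' = F_{0,*}(\sfH') \simeq F_{1,*}(\sfH') = j_*(G_*\sfH')$ within $\sIP_{d+1}(\blank)_{L,R\star}$. Concatenating with the homotopy of the previous paragraph yields $[\sfH] = [j_*(G_*\sfH')]$ in $\sIP_{d+1}[\sM]_{L,R\star}$, which proves surjectivity; for $\sA \neq \emptyset$ one runs the construction rel $\sA$, noting that wherever $\sfH$ already equals $j_*(\text{something})$ the slab has width $0$ and both steps act as the identity. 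Injectivity of $j_*$ on $\sIP_d[\sM,\sA]$ is then obtained by applying the same compression to a homotopy $\sfK \in \sIP_{d+1}(\sM\times[0,1])_{L,R\star}$ between $j_*\sfH^0$ and $j_*\sfH^1$, relative to $\sM\times\{0,1\}$ where $\sfK$ already lies in the hyperplane; the resulting $\sfK' \in \sIP_d(\sM\times[0,1])$ restricts to $\sfH^0$ and $\sfH^1$ at the endpoints.

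The step I expect to be the main obstacle is the truncation. The naive candidate $\Theta_{T_r}\sfH$ does \emph{not} differ from $\sfH$ in an $\ell^1$-summable way over $\Lambda$ — $\sfH$ is genuinely non-trivial on an entire codimension-one sheet of the slab, so $\sum_{\bm x}\|\sfH_{\bm x} - (\Theta_{T_r}\sfH)_{\bm x}\|$ diverges and \cref{lem:cut.diffused} cannot be applied by a direct comparison of local generators. Hence the cutoff has to be achieved by an adiabatic maneuver (routing through the $\cF$-trivial ground state deep in $Y_L$ and the corresponding almost local unitary, and using \cref{thm:interpolation.loop}), and simultaneously one must propagate through the whole construction — for smooth families over a compact base, and in the refined formulation also for the homotopy inverse and null-homotopy — both the support-on-$Y_L$ condition and the $\cF$-triviality-towards-$Y_R$ condition defining $\sIP_{d+1}(\blank)_{L,R\star}$. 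Once the truncation is in place, the compression step is merely an instance of the large-scale Lipschitz homotopy invariance already established, and the bijectivity of $j_*$ on homotopy classes is formal.
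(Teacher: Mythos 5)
Your overall strategy (truncate to a thinner region, then reshuffle $x_{d+1}$ into a virtual coordinate and absorb by coarse homotopy invariance) matches the paper, but the region you truncate to is wrong in a way that breaks the argument. You correctly observe that for $d\geq 1$ the naive cut $\Theta_{T_r}$ to a bounded-width slab $T_r=\{-r<x_{d+1}\le 0\}$ fails the $\ell^1$-hypothesis of \cref{lem:cut.diffused}: the codimension-one sheet $\{x_{d+1}=-r\}$ contains infinitely many sites, at each of which $\|\sfH_{\bm x}-\sfh_{\bm x}\|\gtrsim f(r)>0$ because the $R\star$ decay controls $\sfH-\sfh$ only in terms of $\mathrm{dist}(\bm x,Y_R)$, which is the constant $r$ on that sheet. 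However, the ``adiabatic maneuver'' you propose to get around this is too vague to assess and, as far as I can see, cannot succeed: the problem is not that the slab truncation is merely hard to estimate, but that for $d\geq 1$ there is \emph{no} homotopy in $\sIP_{d+1}(\blank)_{L,R\star}$ from $\sfH$ to a Hamiltonian supported on a slab of bounded thickness, for the same reason the sum diverges. (Your picture is fine for $d=0$, where $\{x_1=-r\}$ is finite, and that is what makes it look plausible.) The paper instead truncates to the conical region $Z_{L,r}^c\cap Y_L$ via $\Theta_{Z_{L,r}^c}$. On $Z_{L,r}$ and near its boundary, $\mathrm{dist}(\bm x,Y_R)=|x_{d+1}|\geq\max\{r,\|\bm x\|/\sqrt 2\}$ grows linearly with $\|\bm x\|$, so the $R\star$ condition together with polynomial growth of $\Lambda$ gives $\sum_{\bm x}\|\sfH_{\bm x}-\Theta_{Z_{L,r}^c}(\sfH)_{\bm x}\|\to 0$ as $r\to\infty$, and \cref{lem:cut.diffused} applies directly — no appeal to \cref{thm:interpolation.loop}, \cref{lem:asymp.equal.state.trivial}, or \cref{prp:left.equivalent.state.unitary} is needed in this lemma.

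The cone $Z_{L,r}^c\cap Y_L$ still has linearly proper projection to $\bR^d$ (its fiber width grows at most linearly), so under the reshuffle $\bR^{d+1}\times\bR^l\cong\bR^d\times\bR^{l+1}$ it identifies with an element of $\fL_d$, and your compression-to-the-hyperplane step then goes through in the same spirit; one adjustment is that since $|x_{d+1}|$ is unbounded on the cone the two embeddings are no longer near, so you should use the explicit linear large-scale Lipschitz homotopy of \cref{exmp:Lipschitz.homotopy}(2) and check that it stays linearly proper (it does, again because the cone's fiber width grows only linearly). Your handling of injectivity and of the refined sheaves $\bsIP$ versus $\sIP$ is consistent with the paper, once the cone replaces the slab.
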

\begin{proof}
    Let $\sM$ be a compact manifold. We first show that $j_* \colon \bsIP_d[\sM] \to \bsIP_{d+1}[\sM]_{L,R\star}$ is surjective. 
    Let $(\sfH , \check{\sfH}, \overline{\sfH} ) \in \bsIP_{d+1}(\sM)_{L,R\star}$ be supported on $\Lambda$.
    Then, by \cref{lem:cut.diffused}, there is $r>0$ such that the convex combination gives a smooth homotopy of $\bsfH$ and 
    \[
    \Theta_{Z_{L,r}^c , \omega_0}(\bsfH) \coloneqq (\Theta_{Z_{L,r}^c , \omega_0}(\sfH) , \Theta_{Z_{L,r}^c , \omega_0}(\check{\sfH}), \Theta_{Z_{L,r}^c , \omega_0}(\overline{\sfH}) ) \in \bsIP_{d+1}(\sM)_{L,R\star}
    \]
    Now, $\Theta_{Z_{L,r}^c, \omega_0}(\bsfH)$ is supported on $Z_{L,r}^c \cap Y_L$. 
    Under the identification of Euclidean spaces $\bR^{d+1} \times \bR^{l} \cong \bR^d \times \bR^{l+1}$, the subspace $\pr_{\bR^d}^{-1}(Z_{L,r}^c \cap Y_L) \subset \Lambda$ is identified with an element of $\fL_{d}$. 
    Now, both $\Theta_{Z_{L,r}^c, \omega_0}(\bsfH)$ and $j_* \Theta_{Z_{L,r}^c, \omega_0}(\bsfH)$ are regarded as an element of $\bsIP_d(\sM)_{L,R\star}$, and are smoothly homotopic by \cref{exmp:Lipschitz.homotopy} (2). 

    The injectivity is proved similarly. If $\bsfH \in \bsIP_{d}(\sM)$ satisfies that there is a null-homotopy $\widetilde{\bsfH}$ of $j_*\bsfH$, then $\Theta_{Z_{L,r}^c , \omega_0}(\widetilde{\bsfH}) \in \bsIP_{d}(\sM)$ gives a null-homotopy of $j_*'\bsfH$, where $j' \colon \bR^d \times \bR^l \to \bR^d \times \bR^{l+1} $ is given by $j'(x_1,\cdots,x_{d+l}) = (x_1,\cdots,x_d,0,x_{d+1},\cdots,x_{d+l})$. Moreover, again by \cref{exmp:Lipschitz.homotopy} (2), $j_*' \bsfH$ and $\bsfH$ are homotopic in the sheaf $\bsIP_d$. They show that $[\bsfH] = [j_*'\bsfH] = [\sfh] \in \bsIP_d[\sM]$. 
\end{proof}

\begin{lem}\label{lem:switch.constant.equivalence}
    The inclusion $\iota \colon \bsIP_{d+1}(\blank)_{L,R\star} \to \cS \bsIP_{d+1}$ is a weak equivalence. 
\end{lem}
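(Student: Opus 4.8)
The strategy is to exhibit a deformation retraction, at the level of smooth homotopy sets, of $\cS\bsIP_{d+1}$ onto the subsheaf $\bsIP_{d+1}(\blank)_{L,R\star}$ of constant paths. The key observation is that a section of $\cS\bsIP_{d+1}$ on a compact $\sM$ is a smooth path $\sfH$ of IG UAL Hamiltonians with $\ev_0\sfH \in \sIP_{d+1}(\sM)_{R\star}$ (asymptotically trivial toward $Y_R$) and $\ev_1\sfH \in \sIP_{d+1}(\sM)_L$ (supported on $Y_L$). By \cref{lem:cut.diffused} applied along the conical decomposition, after a preliminary smooth homotopy we may assume $\ev_0\sfH$ is actually supported on $Z_{L,r}^c$ and $\ev_1\sfH$ on $Y_L$ for a common large $r>0$, so both endpoints lie in the weakly contractible subsheaf of sections supported on $Z_{L,r}^c \cap Y_L$, which is identified (via $\bR^{d+1}\times\bR^l \cong \bR^d\times\bR^{l+1}$) with an element of $\fL_d$, exactly as in the proof of \cref{lem:weak.equivalence.stick}.

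\textbf{Main steps.} First I would show $\iota$ induces a surjection $\bsIP_{d+1}[\sM]_{L,R\star} \to \cS\bsIP_{d+1}[\sM]$ for compact $\sM$: given $\sfH \in \cS\bsIP_{d+1}(\sM)$, use the truncated adiabatic interpolation of \cref{thm:interpolation.loop} together with \cref{lem:cut.diffused} to replace $\sfH$, up to smooth homotopy through $\cS\bsIP_{d+1}$, by a path whose underlying Hamiltonians are supported on $Z_{L,r}^c$ for all $t$; then, since such a path has both endpoints and all intermediate terms supported in a region identifiable with a $d$-dimensional lattice, the weak contractibility established in \cref{prp:Eilenberg.swindle} (more precisely, the variant for $Z_{L,r}^c$-supported sections noted in the proof of \cref{lem:switch.loop.equivalence}) provides a further homotopy contracting the path direction, landing on a constant path in $\bsIP_{d+1}(\sM)_{L,R\star}$. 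Second, for injectivity, I would run the same argument one dimension up: a null-homotopy in $\cS\bsIP_{d+1}$ of a constant path $\iota\bsfH$ is a section over $\sM\times[0,1]$, and applying the cutoff $\Theta_{Z_{L,r}^c}$ and the Eilenberg-swindle contraction produces a null-homotopy inside $\bsIP_{d+1}(\blank)_{L,R\star}$. One should organize this so that the retraction in the path coordinate is natural enough to assemble over all manifolds, invoking \cref{thm:MW} and the Whitehead theorem to upgrade the bijection on smooth homotopy sets to a weak equivalence; alternatively, one can check $\iota$ is the inclusion of a retract by directly producing a retraction morphism $\cS\bsIP_{d+1}\to\bsIP_{d+1}(\blank)_{L,R\star}$ built from $\ev_0$ postcomposed with the truncation $\Theta_{Z_{L,r}^c}$ and showing it is a homotopy inverse using the swindle.

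\textbf{Main obstacle.} The delicate point is uniformity of the constant $r$: the required cutoff radius produced by \cref{lem:cut.diffused} depends on the seminorm bounds of the path $\sfH$, which vary over the parameter $\sM$ and over $t\in[0,1]$; for compact $\sM$ one gets a uniform $r$ by a compactness argument, but one must verify that the resulting homotopy is smooth jointly in all variables and respects the constraints defining $\cS\bsIP_{d+1}$ (that $\ev_0$ stays in $\sIP_{d+1}(\blank)_{R\star}$ throughout, which is where the finiteness of $K_{Y_R,\sfH-\sfh,\nu,\mu}^{(k)}$ from \cref{defn:swithcing.IP.sheaf} is essential, and that $\ev_1$ stays supported on $Y_L$). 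Keeping the truncated adiabatic interpolation of \cref{thm:interpolation.loop} and the Eilenberg-swindle pump of \cref{prp:Eilenberg.swindle} compatible at the moving endpoint $t=0$ — so that the two homotopies glue into one section of $\cS\bsIP_{d+1}$ — is the step requiring the most care, but it is structurally identical to what was already done in \cref{lem:switch.loop.equivalence} and \cref{lem:weak.equivalence.stick}, so no genuinely new analytic input beyond \cref{section:analysis.spin} is needed.
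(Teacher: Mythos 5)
The right key lemma is identified --- \cref{thm:interpolation.loop} --- but the construction of the retraction is wrong in both versions you offer, and the Eilenberg swindle is invoked where it cannot help.

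First, the intended weak inverse in the paper is the morphism $\vartheta_d$ defined by applying $\tilde{\vartheta}$ to the whole path and then evaluating the output at $t=0$: $\vartheta_d(\sfH,\check{\sfH},\overline{\sfH})(p)=(\tilde{\vartheta}\sfH(p,0),\tilde{\vartheta}\check{\sfH}(p,0),\tilde{\vartheta}\overline{\sfH}(p,0))$. The verification that this lands in $\bsIP_{d+1}(\blank)_{L,R\star}$ needs a genuine estimate comparing $\tilde{\vartheta}\sfH$ to $\sfh$, which splits into a term controlled by \cref{rmk:cutoff.approximate}, a term controlled by \cref{lem:LGA.cone.decomposition}, and the $R\star$ decay of $\ev_0\sfH$. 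Your alternative retraction, $\Theta_{Z_{L,r}^c}\circ\ev_0$, cannot be right for two reasons: $Z_{L,r}^c$ is not contained in $Y_L$ (it contains $Y_R$ and opens to the right), so the output is not in the $L$ subsheaf; and more fundamentally, one cannot apply $\Theta$ to $\ev_0\sfH$ directly, because the ground state of $\ev_0\sfH$ does not factor across $Y_R\sqcup Y_R^c$ --- the $R\star$ condition is only an asymptotic one. The point of $\tilde{\vartheta}$ is precisely to use the LG transport along the path to prepare the ground state to be a product across $Y_R$ before truncating.

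Second, the Eilenberg swindle does not appear in the argument at all, and cannot. You propose to "contract the path direction" using the weak contractibility established in \cref{prp:Eilenberg.swindle}. But $\cS\bsIP_{d+1}$ is weakly equivalent to $\bsIP_d$ --- not contractible --- so no contractibility argument can produce the needed retraction without giving the manifestly false conclusion that all phases in $\bsIP_{d+1}(\blank)_{L,R\star}$ are trivial. Nor is the band $Y_L\cap Z_{L,r}^c$ a half-space of the kind to which the swindle applies; it is coarsely a copy of $\bR^d$, which is why it supports all of $\bsIP_d$. The homotopy $\iota\circ\vartheta_d\simeq\id$ in the paper is constructed from three concrete steps: a reparametrization of the path variable $(p,t,s)\mapsto\tilde{\vartheta}\sfH(p,st)$, a convex combination removing the $\Theta_{Y_R^c}$ cutoff, and the unwinding of the LG automorphisms; none of these is the swindle, which is the mechanism used in \cref{prp:Eilenberg.swindle} and \cref{lem:switch.loop.equivalence} but not here.

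Finally, the "main obstacle" you flag (uniformity in $r$) is an artifact of routing through \cref{lem:cut.diffused}; the paper's argument avoids any cutoff radius by working with $Y_R^c$ directly, so there is no compactness issue of the kind you raise.
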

\begin{proof}
    We apply \cref{thm:interpolation.loop} to $Y= Y_R$. Then, the adiabatic interpolation gives a morphism of sheaves
    \begin{gather*}
    \vartheta_d \colon \cS \bsIP_{d+1} \to \bsIP_{d+1}(\blank)_{L,R\star} , \\
    \vartheta_d (\sfH,\check{\sfH},\overline{\sfH})(p) \coloneqq  (\tilde{\vartheta}\sfH (p,0), \tilde{\vartheta}\check{\sfH} (p,0), \tilde{\vartheta}\overline{\sfH} (p,0)). 
    \end{gather*}
    The only non-trivial part of this statement is $\tilde{\vartheta}\sfH \in \sIP_{d+1}(\sM)_R$, which we now verify. Indeed, we have
    \begin{align}
    \begin{split}
    {}&{} \| \tilde{\vartheta}\sfH_{\bm{x}}  - \sfh_{\bm{x}} \|_{\sU,C^k} \\
    = {}&{} \big\| \Theta_{Y_R^c , \omega_0} \circ \alpha \big( \Pi_{Y_R}^\star (\sfG_{\sfH}) \,; t, 1\big)^{-1} (\sfH_{\bm{x}}) - \sfh_{\bm{x}} \big\|_{\sU,C^k} \\
    \leq {}&{} \| \Theta_{Y_R^c , \omega_0}  (\alpha \big( \Pi_{Y_R}^\star (\sfG_{\sfH}) \,; t,1\big)^{-1} (\sfH))_{\bm{x}} -  \alpha \big( \Pi_{Y_R}^\star (\sfG_{\sfH}) \,; t,1\big)^{-1} (\sfH_{\bm{x}}) \|_{\sU,C^k} \\
    {}&{} \quad + \big\| \alpha \big( \Pi_{Y_R}^\star (\sfG_{\sfH}) \,; t,1\big)^{-1} (\sfH_{\bm{x}}) - \sfH_{\bm{x}}\big\|_{\sU,C^k} + \big\| \sfH_{\bm{x}} - \sfh_{\bm{x}} \big\|_{\sU,C^k}.
    \end{split}\label{eqn:theta.decay}
    \end{align}
    The third term decays faster than $f(\mathrm{dist}(\bm{x},Y_R))$ for any $f \in \cF$ by assumption. The same is true for the first and the second terms by \cref{rmk:cutoff.approximate,lem:LGA.cone.decomposition} respectively.     
    
    By definition, we have $\vartheta_d \circ \iota  =\id$. Therefore, it suffices to show that $\iota \circ \vartheta_d (\bsfH) $ is smoothly homotopic to $\bsfH $ for any $\bsfH  \in \bsIP_{d+1}(\sM)_{L,R\star}$. 
    First, by \eqref{eqn:theta.decay}, we have that a function
    \[
       (p,t,s) \mapsto ( \tilde{\vartheta}\sfH (p,st), \tilde{\vartheta}\check{\sfH} (p,st), \tilde{\vartheta}\overline{\sfH} (p,st)) 
    \]
    gives an element of $\cS\bsIP_{d+1}(\sM \times [0,1]_s)$. By regarding it as a smooth homotopy of $\cS\bsIP_{d+1}$, with respect to the parameter $s \in [0,1]$, we obtain that
    \[
    \iota \circ \vartheta_d(\bsfH) \simeq ( \tilde{\vartheta}\sfH , \tilde{\vartheta}\check{\sfH} , \tilde{\vartheta}\overline{\sfH} ) \in \cS \bsIP_{d+1}(\sM). 
    \]
    Moreover, since 
    \[
    ( \tilde{\vartheta}\sfH  , \tilde{\vartheta}\check{\sfH} , \tilde{\vartheta}\overline{\sfH} )(p,1) = (\sfH,\check{\sfH},\overline{\sfH})(p,1) \in \bsIP_d(\sM)_L
    \]
    is supported on $Y_L$, the convex combination gives the second smooth homotopy
    \begin{align*}
    {}&{}( \tilde{\vartheta}\sfH , \tilde{\vartheta}\check{\sfH} , \tilde{\vartheta}\overline{\sfH} )\\
    ={}&{} \Theta_{Y_R^c , \omega_0} \big( \alpha(\Pi_{Y_R}^{\star}\sfG_{\sfH}\,; t,1)^{-1}(\sfH(p,t)), \alpha(\Pi_{Y_R}^{\star}\sfG_{\check{\sfH}}\,; t,1)^{-1}(\check{\sfH}(p,t)),  \alpha(\Pi_{Y_R}^{\star}\sfG_{\overline{\sfH}}\,; t,1)^{-1}(\overline{\sfH}(p,t))\big)     \\
    \simeq {}&{}\big( \alpha(\Pi_{Y_R}^{\star}\sfG_{\sfH}\,; t,1)^{-1}(\sfH(p,t)), \alpha(\Pi_{Y_R}^{\star}\sfG_{\check{\sfH}}\,; t,1)^{-1}(\check{\sfH}(p,t)),  \alpha(\Pi_{Y_R}^{\star}\sfG_{\overline{\sfH}}\,; t,1)^{-1}(\overline{\sfH}(p,t))\big) .    
    \end{align*}
    Finally, by unwinding the LG automorphisms, the right hand side is smoothly homotopic to $(\sfH,\check{\sfH},\overline{\sfH})$. These three homotopies conclude $[\iota \circ \vartheta_d (\bsfH)]=[\bsfH ] \in \cS\bsIP_{d+1}[\sM]$ as desired. 
\end{proof}

\begin{lem}\label{lem:switch.diagram.commute}
    The following diagram commutes up to homotopy:
    \[
    \xymatrix{
    \bsIP_{d} \ar[rr]^{\boldsymbol{\kappa}_d} \ar[rd]_{\simeq}^{\iota \circ j_*} && \Omega \bsIP_{d+1}\ar[ld]^{\simeq}_{i} \\
    &\cS \bsIP_{d+1}.&
    }
    \]
\end{lem}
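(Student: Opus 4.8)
The plan is to compare both composites with the \emph{truncated} Kitaev pump $\boldsymbol{\kappa}_d^R$ of \cref{lem:truncated.pump}. First I would unwind the two morphisms $\bsIP_d \to \cS\bsIP_{d+1}$ in question: for $\bsfH$ supported on $\Lambda$, the section $i\circ\boldsymbol{\kappa}_d(\bsfH)$ is the full Kitaev pump loop $\kappa_d(\bsfH)$ (with its companion inverse and null-homotopy data) on $\Lambda\times\bZ$, which is a section of $\cS\bsIP_{d+1}$ because its two ends are $\sfh\in\bsIP_{d+1}(\blank)_{R\star}\cap\bsIP_{d+1}(\blank)_L$; while $\iota\circ j_*(\bsfH)$ is the constant path at $j_*\bsfH$, supported on the wall $x_{d+1}=0\subset Y_L$ and hence lying in $\bsIP_{d+1}(\blank)_{L,R\star}$. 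Next I would observe that $\boldsymbol{\kappa}_d^R(\bsfH)$ is itself a section of $\cS\bsIP_{d+1}$: the entire path is supported on $Y_R=\Lambda\times\bZ_{\geq0}$, so every constant $K_{Y_R,\,\cdot}^{(k)}$ of \eqref{eqn:constant.K} vanishes and it lies in $\bsIP_{d+1}(\blank)_{R\star}$ pointwise in the path parameter; its $t=0$ end is $\sfh$, and by \cref{lem:truncated.pump}(3) its $t=1$ end is $\bsfH\boxtimes\sfh^{\boxtimes\bZ_{\geq1}}$, which under $\Lambda\times\bZ_{\geq0}\hookrightarrow\Lambda\times\bZ$ is a near map to $j_*\bsfH$ and so induces a homotopic morphism by \cref{lem:flip.homotopy.use}.

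\textbf{Stage 1: $\iota\circ j_*\simeq\boldsymbol{\kappa}_d^R$.} Since $\boldsymbol{\kappa}_d^R(\bsfH)$ is a path lying entirely inside $\bsIP_{d+1}(\blank)_{R\star}$ from $\sfh$ to $j_*\bsfH$, the reparametrization $H_s|_t\coloneqq\boldsymbol{\kappa}_d^R(\bsfH)|_{(1-s)t+s}$ reels it in: at $s=0$ it equals $\boldsymbol{\kappa}_d^R(\bsfH)$, at $s=1$ it is the constant path at $j_*\bsfH$, and for each $s$ one has $\ev_0 H_s=\boldsymbol{\kappa}_d^R(\bsfH)|_s\in\bsIP_{d+1}(\blank)_{R\star}$ and $\ev_1 H_s=j_*\bsfH\in\bsIP_{d+1}(\blank)_L$, so $H$ is a homotopy inside $\cS\bsIP_{d+1}$. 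Carrying the companion data $\check{\boldsymbol{\kappa}}_d^R,\bar{\boldsymbol{\kappa}}_d^R$ along and smoothing the ends as in \cref{rmk:smoothing.path} is routine, and compatibility with the colimit $\bsIP_d=\colim_{\Lambda\in\fL_d}$ is automatic because $\boldsymbol{\kappa}_d^R$ commutes with inclusions of lattices.

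\textbf{Stage 2: $i\circ\boldsymbol{\kappa}_d\simeq\boldsymbol{\kappa}_d^R$.} This is the substantive step. The full pump loop performs the pumping simultaneously on $\Lambda\times\bZ_{\geq0}$ and on $\Lambda\times\bZ_{<0}$ (with the boundary pair $(-1,0)$ straddling the two halves during $t\in[1/2,1]$) and returns to $\sfh$, whereas $\boldsymbol{\kappa}_d^R(\bsfH)$ runs only the $\bZ_{\geq0}$-half and deposits one copy of $\bsfH$ at level $0$; by \cref{lem:truncated.pump}(2) the two already coincide on every level $n\geq2$. I would deform the full pump, via \cref{thm:interpolation.loop} and \cref{lem:cut.diffused} applied with $Y=Y_R$ (equivalently with $Y_R^c=\Lambda\times\bZ_{<0}$), so as to disentangle the straddling boundary box and collect the interaction near level $0$ into a localized defect; after this the $\bZ_{<0}$-part becomes a loop $\sfh\rightsquigarrow\sfh$ supported away from the boundary, and an Eilenberg-swindle telescoping exactly as in the proof of \cref{prp:Eilenberg.swindle} (using the shift employed there) trivializes that $\bZ_{<0}$-part, leaving precisely $\boldsymbol{\kappa}_d^R(\bsfH)$. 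Throughout this deformation the $\cS$-conditions must be preserved: the $t=0$ end stays in $\bsIP_{d+1}(\blank)_{R\star}$ because the defect corrections decay faster than any $f\in\cF$ away from $Y_R$ (by \cref{rmk:cutoff.approximate} and \cref{lem:LGA.cone.decomposition}, as in the estimate \eqref{eqn:theta.decay}), and the $t=1$ end stays supported on $Y_L$.

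An alternative route for Stage 2 is to use the weak inverse $\vartheta_d$ of $\iota$ from \cref{lem:switch.constant.equivalence}: since $\vartheta_d\circ\iota=\id$ and $\iota\circ\vartheta_d\simeq\id$, it suffices to prove $\vartheta_d\circ i\circ\boldsymbol{\kappa}_d\simeq j_*$ as morphisms $\bsIP_d\to\bsIP_{d+1}(\blank)_{L,R\star}$, and then $i\circ\boldsymbol{\kappa}_d\simeq\iota\circ\vartheta_d\circ i\circ\boldsymbol{\kappa}_d\simeq\iota\circ j_*$. Here $\vartheta_d(i\boldsymbol{\kappa}_d(\bsfH))$ is the adiabatic interpolation of the pump loop truncated to $Y_R^c$, i.e.\ the holonomy of the Kitaev pump localized at the wall $x_{d+1}=0$, and the construction of $\kappa_d$ is arranged so that this holonomy is exactly one copy of $\bsfH$ at $x_{d+1}=0$, namely $j_*\bsfH$. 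Either way, \textbf{the main obstacle is Stage 2}: extracting the pumped unit across the boundary and verifying that the $R\star$- and $L$-conditions of \cref{defn:swithcing.IP.sheaf} are preserved all along the deformation; the remaining ingredients (Stage 1, the colimit and companion-data bookkeeping, the smoothing conventions) are routine.
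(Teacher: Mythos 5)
Your proposal takes a genuinely different route from the paper, and the difference is not merely stylistic: the paper never compares the full pump to the truncated pump $\boldsymbol{\kappa}_d^R$ at all. Its entire proof of \cref{lem:switch.diagram.commute} is a single explicit section of $\cP(\cS\bsIP_{d+1})$, namely the formula \eqref{eqn:pump.constant.homotopy}: shift the pump backward in time by $s$ on $\Lambda\times\bZ_{>0}$, forward in time by $s$ on $\Lambda\times\bZ_{<0}$, and freeze $\bsfH_{\bm{x}}$ at level $0$ on the widening interval $[1/2-s,1/2+s]$. As $s\to 1$ both shifted halves retreat past the ends of $[0,1]$ and become $\sfh$, leaving the constant path $j_*\bsfH$; for each fixed $s$ the $t=0$ end is $\sfh$ on $Y_R$ (so the $R\star$ constants vanish) and the $t=1$ end is supported on $Y_L$. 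No adiabatic interpolation, no Eilenberg swindle, and no reference to $\boldsymbol{\kappa}_d^R$ is needed.

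Your Stage~1 is essentially correct (the truncated pump and its companion data are literally supported on $Y_R$, so the $R\star$ constants vanish along the whole reparametrization), but Stage~2 is a genuine gap, and you acknowledge as much. In the first version, the ``deform via $\Theta$ and swindle'' step amounts to showing that a certain based loop of IG UAL Hamiltonians supported on $Y_L$ is null-homotopic in a way compatible with the $\cS$-conditions; this is essentially as strong as the equivalence one is trying to establish, and you do not spell out the deformation to the point where the $R\star$ and $L$ conditions can be verified to survive. In the second version, post-composing with $\vartheta_d$ is a cleaner strategy, but the claim that $\vartheta_d(i\boldsymbol{\kappa}_d(\bsfH))$ \emph{is} exactly $j_*\bsfH$ ``by construction'' is false: the adiabatic interpolation of the pump loop is only a priori \emph{homotopic} to $j_*\bsfH$, and producing that homotopy is precisely what still requires an argument — which is what the direct time-shift/freeze homotopy above supplies without detour. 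I would therefore abandon the factorization through $\boldsymbol{\kappa}_d^R$ and look for the explicit homotopy directly.
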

\begin{proof}
    The homotopy connecting $\boldsymbol{\kappa}_d(\bsfH)$ and the constant path taking value in $\bsfH$ in the sheaf $\cS \bsIP_{d+1}$ is explicitly given by
    \begin{align}
        \widetilde{\bsfH}_{(\bm{x},n)}(p,t,s)\coloneqq 
        \begin{cases}
            \boldsymbol{\kappa}_d(\bsfH)_{(\bm{x},n)} (p,t-s) & \text{ if $n >0$ or $ (n,t) \in \{ 0 \} \times [0,1/2-s] $, } \\
            \boldsymbol{\kappa}_d(\bsfH)_{(\bm{x},n)} (p,t+s) & \text{ if $n <0$ or $ (n,t) \in \{ 0 \} \times [1/2+s,1] $, }\\
            \bsfH_{\bm{x}} & \text{ if $n=0$ and $t \in [1/2-s , 1/2+s]$}
        \end{cases}\label{eqn:pump.constant.homotopy}
    \end{align}
    See also \cref{fig:pump.constant.homotopy}. 
\end{proof}

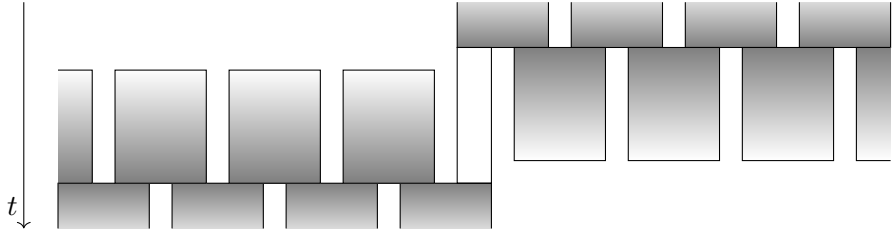
\begin{figure}
    \centering
    \begin{tikzpicture}[xscale=1.5, yscale=1.5]
        \draw[->] (-0.3,2.0) -- (-0.3,0.0);
        \node at (-0.4,0.2) {$t$};
        \draw[shade, top color = gray, bottom color = {rgb:black,1;white,6}] (0,0) -- (0,0.4) -- (0.8,0.4) -- (0.8,0);
        \draw[shade, top color = gray, bottom color = {rgb:black,1;white,6}] (1,0) -- (1,0.4) -- (1.8,0.4) -- (1.8,0);
        \draw[shade, top color = gray, bottom color = {rgb:black,1;white,6}] (2,0) -- (2,0.4) -- (2.8,0.4) -- (2.8,0);
        \draw[shade, top color = gray, bottom color = {rgb:black,1;white,6}] (3,0) -- (3,0.4) -- (3.8,0.4) -- (3.8,0);
        \draw[shade, top color = white, bottom color = gray] (0.0,0.4) -- (0.3,0.4) -- (0.3,1.4) -- (0.0,1.4);
        \draw[shade, top color = white, bottom color = gray] (0.5,0.4) -- (0.5,1.4) -- (1.3,1.4) -- (1.3,0.4) -- cycle;
        \draw[shade, top color = white, bottom color = gray] (1.5,0.4) -- (1.5,1.4) -- (2.3,1.4) -- (2.3,0.4) -- cycle;
        \draw[shade, top color = white, bottom color = gray] (2.5,0.4) -- (2.5,1.4) -- (3.3,1.4) -- (3.3,0.4) -- cycle;
        \draw[shade, bottom color = gray, top color = {rgb:black,1;white,6}] (3.5,2) -- (3.5,1.6) -- (4.3,1.6) -- (4.3,2);
        \draw[shade, bottom color = gray, top color = {rgb:black,1;white,6}] (4.5,2) -- (4.5,1.6) -- (5.3,1.6) -- (5.3,2);
        \draw[shade, bottom color = gray, top color = {rgb:black,1;white,6}] (5.5,2) -- (5.5,1.6) -- (6.3,1.6) -- (6.3,2);
        \draw[shade, bottom color = gray, top color = {rgb:black,1;white,6}] (6.5,2) -- (6.5,1.6) -- (7.3,1.6) -- (7.3,2);
        \draw[shade,top color = gray, bottom color = white] (4.0,1.6) -- (4.0,0.6) -- (4.8,0.6) -- (4.8,1.6) -- cycle;
        \draw[shade,top color = gray, bottom color = white] (5.0,1.6) -- (5.0,0.6) -- (5.8,0.6) -- (5.8,1.6) -- cycle;
        \draw[shade,top color = gray, bottom color = white] (6.0,1.6) -- (6.0,0.6) -- (6.8,0.6) -- (6.8,1.6) -- cycle;
        \draw[shade,top color = gray, bottom color = white] (7.3,1.6) -- (7.0,1.6) -- (7.0,0.6) -- (7.3,0.6);
        \draw (3.5,0.4) -- (3.5,1.6) -- (3.8,1.6) -- (3.8,0.4) -- cycle;
    \end{tikzpicture}
    \caption{The homotopy \eqref{eqn:pump.constant.homotopy}. The shaded squares represent the homotopy $\overline{\sfH}$. The middle white square stands for $\sfH$. }
    \label{fig:pump.constant.homotopy}
\end{figure}

\begin{thm}\label{cor:spectrum}
The invertible gapped phase spectrum $\{ \IP_d, \kappa_d\}$ is an $\Omega$-spectrum.  
\end{thm}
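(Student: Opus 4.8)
The goal is to assemble the previously established weak equivalences into the statement that $\kappa_d\colon \IP_d \to \Omega\IP_{d+1}$ is a weak homotopy equivalence for every $d$, which is precisely the content of being an $\Omega$-spectrum (together with the already-fixed structure maps). The plan is to run the argument first at the level of sheaves on $\Man$ and then pass to realizations via \cref{thm:MW}.

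First I would note that $\cref{lem:switch.diagram.commute}$ gives a commuting triangle in $\Sh(\Man)$ relating $\boldsymbol{\kappa}_d$, the composite $\iota\circ j_*$, and the inclusion $i\colon \Omega\bsIP_{d+1}\to\cS\bsIP_{d+1}$. By $\cref{lem:switch.loop.equivalence}$ the map $i$ is a weak equivalence, and by $\cref{lem:weak.equivalence.stick}$ and $\cref{lem:switch.constant.equivalence}$ both $j_*\colon \bsIP_d\to\bsIP_{d+1}(\blank)_{L,R\star}$ and $\iota\colon \bsIP_{d+1}(\blank)_{L,R\star}\to\cS\bsIP_{d+1}$ are weak equivalences, hence so is $\iota\circ j_*$. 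Two-out-of-three in the commuting triangle then forces $\boldsymbol{\kappa}_d\colon \bsIP_d\to\Omega\bsIP_{d+1}$ to be a weak equivalence of sheaves on $\Man$; since weak equivalence of sheaves is defined via isomorphism on all smooth homotopy groups, and $\mathsf{fg}\circ\boldsymbol{\kappa}_d=\kappa_d$ by $\cref{thm:inverse.pump}$ with $\mathsf{fg}$ a weak equivalence ($\cref{lem:forget.invertible}$), this is equivalent to $\kappa_d$ inducing the identification $\sIP_d[\sM]\cong\Omega\sIP_{d+1}[\sM]$ for all $\sM$.

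Next I would transfer to topological spaces: applying the geometric realization $|\Sing(\blank)|$ and using $\cref{thm:MW}$ together with $\cref{lem:survey.homotopy}$ (1), which gives $|\Sing(\Omega\sIP_{d+1})|\simeq\Omega|\Sing(\sIP_{d+1})|\simeq\Omega\IP_{d+1}$, the sheaf-level weak equivalence $\boldsymbol{\kappa}_d$ becomes a weak homotopy equivalence $\kappa_d\colon\IP_d\to\Omega\IP_{d+1}$ of CW-complexes; by Whitehead it is a genuine homotopy equivalence. Since this holds for all $d\in\bZ_{\geq0}$ and the structure maps $\kappa_d$ were fixed in $\cref{defn:Kitaev.pump}$, the family $\{\IP_d,\kappa_d\}$ satisfies the defining property of an $\Omega$-spectrum. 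The same reasoning applies verbatim to $\{\fIP_d\}$ and (using the equivariant realization $|\Sing_G(\blank)|_G$ of $\cref{thm:equivariant.MW}$, checking fixed-point subsheaves) to $\{\IP_d^G\}$, yielding the fermionic and $G$-equivariant statements of $\cref{cor:fermionic.equivariant.spectrum}$.

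I do not expect any single step here to be a genuine obstacle, since all the hard analysis is already packaged into Lemmas \ref{lem:switch.loop.equivalence}, \ref{lem:weak.equivalence.stick}, \ref{lem:switch.constant.equivalence}, and \ref{lem:switch.diagram.commute}; the only thing requiring a little care is bookkeeping the two-out-of-three argument through the refinement sheaves $\bsIP_d$ versus $\sIP_d$, making sure that $\boldsymbol{\kappa}_d$ (rather than merely $\kappa_d$ up to homotopy) sits in the commuting triangle, and that the basepoint-compatibility needed for the loop-space identification $|\Sing(\Omega\sIP_{d+1})|\simeq\Omega\IP_{d+1}$ is in place. If one wanted to avoid even the mild subtlety of comparing $\Omega\bsIP_{d+1}$ with $\Omega\sIP_{d+1}$, one could instead observe directly that $\cS\bsIP_{d+1}$ is weakly equivalent to $\Omega\bsIP_{d+1}$ and that $\iota\circ j_*$ factors $\boldsymbol{\kappa}_d$ up to the contractible ambiguity recorded in $\cref{lem:forget.invertible}$, which is what $\cref{lem:switch.diagram.commute}$ already encodes.
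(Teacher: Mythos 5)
Your proof is correct and follows essentially the same approach as the paper's own proof, which simply cites \cref{lem:switch.constant.equivalence,lem:switch.loop.equivalence,lem:switch.diagram.commute,prp:Eilenberg.swindle} and lets the two-out-of-three argument in the commuting triangle do the work. You are in fact slightly more careful than the paper in explicitly invoking \cref{lem:weak.equivalence.stick} for the weak equivalence of $j_*$, which the paper's proof statement omits from its citation list even though it is needed, and in spelling out the passage from sheaves to realizations via \cref{thm:MW} and \cref{lem:survey.homotopy}.
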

\begin{proof}
    It follows from \cref{lem:switch.constant.equivalence,lem:switch.loop.equivalence,lem:switch.diagram.commute,prp:Eilenberg.swindle}.
\end{proof}

\subsection{Fermionic and equivariant versions}\label{subsection:SPT}
The proofs of \cref{thm:inverse.pump,prp:Eilenberg.swindle,cor:spectrum} are immediately made fermionic and $G$-equivariant. Therefore, we get the following variations of $\Omega$-spectra.

\begin{cor}\label{cor:fermionic.equivariant.spectrum}
The following hold.
\begin{enumerate}
    \item Let $\fR$ be a set of fermionic internal degrees of freedom. Then the sequence of sheaves $\{ \sfIP_d\}$, and hence the sequence of spaces $\{ \fIP_{\fR,d}\}$, form an $\Omega$-spectrum object via the Kitaev pump $\kappa_d$ (\cref{defn:Kitaev.pump}) as connecting maps. 
    \item Let $G$ be a compact Lie group, $\phi \colon G \to \bZ/2$ be a homomorphism, and let $\fR$ be a set of bosonic (resp.\ fermionic) internal degrees of freedom with on-site $(G,\phi)$-symmetry. Then the sequence of $G$-sheaves $\{ \fIP_{\fR,d}\}$, and hence the sequence of $G$-spaces $\{ \IP_{\fR,d}\}$, form a naive $G$-spectrum via the Kitaev pump $\kappa_d$ as connecting maps.
\end{enumerate}          
\end{cor}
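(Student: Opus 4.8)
\textbf{Proof strategy for \cref{cor:fermionic.equivariant.spectrum}.}
The plan is to observe that every construction in \cref{subsection:Kitaev} and \cref{subsection:bulk.boundary} is assembled out of three kinds of ingredient — the composite product $\boxtimes$, the permutation (flip) automorphisms of \cref{lem:flip.homotopy}, and the analytic operations of \cref{section:analysis.spin} (the adiabatic connection $\sfG_{\sfH}$, its truncations $\Pi_{Y}^{\star}$, the redistributed cut $\Theta_{Y,\omega}$, and the associated parallel transports) — and that each of these is compatible with the $\bZ/2$-grading and with an on-site $(G,\phi)$-action. Once this is in place, the whole zig-zag of weak equivalences proving \cref{cor:spectrum} carries over \emph{verbatim}, first to the fermionic sheaves $\sfGP_{\fR,d}$, $\bsfIP_{\fR,d}$, and then — after restriction to $H$-fixed-point subsheaves for every closed $H\le G$ — to the $(G,\phi)$-symmetric sheaves.

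First I would record that the analytic foundations are already fermionic and equivariant: as noted at the end of \cref{subsection:GS} and in \cref{lem:fermion.onsite.invariant.analysis} together with the paragraph following it, UAL derivations and Hamiltonians and their parallel transports retain the grading and the $G$-invariance; the Lieb--Robinson estimates of \cref{prp:Lieb.Robinson} and \cref{prp:Lieb.Robinson.Ck} and the automorphic equivalence \cref{thm:automorphic.equivalence} hold with the graded commutator \eqref{eqn:graded.commutator} (and for $G$-invariant families), because $\Pi_Z$ is a graded conditional expectation and $\sfG_{\sfH}$ is $G$-invariant whenever $\sfH$ is. Hence \cref{thm:interpolation.loop}, \cref{prp:cut.trivial.Hamiltonian} and \cref{lem:cut.diffused} hold in $\ffH^{\al}_{\Lambda}$ (resp.\ for $G$-invariant smooth families), with $\Theta_{Y,\omega}\sfH$ and $\tilde{\vartheta}\sfH$ again grading-preserving and $G$-invariant. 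Next I would note that the Kitaev pump $\kappa_d$, its refinement $\boldsymbol{\kappa}_d$ of \cref{thm:inverse.pump}, and the truncated pump $\boldsymbol{\kappa}_d^R$ of \cref{lem:truncated.pump} are built only from $\boxtimes$, $\flip$, reparametrizations of $\overline{\sfH}$, and \cref{lem:flip.homotopy}; since the flip homotopy $\widetilde{\flip}$ is realized through the real group algebra $\bR[\fS_4]$ acting by permutation of the four tensor factors, it commutes with the diagonal $\phi$-twisted $G$-action and is grading-preserving (lands in the even subalgebra in the fermionic case), so \cref{lem:flip.homotopy}, \cref{lem:H-monoid} and \cref{lem:bold.invertible.sheaf} all survive, and $\boldsymbol{\kappa}_d$ is a well-defined ($G$-equivariant) morphism of (fermionic) sheaves.

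With these inputs, I would reread \cref{prp:Eilenberg.swindle}, \cref{lem:switch.loop.equivalence}, \cref{lem:weak.equivalence.stick}, \cref{lem:switch.constant.equivalence} and \cref{lem:switch.diagram.commute} and check that each argument uses only the ingredients above together with the spatial shift $\widetilde{T}_d$ and the swindle lattice $\widetilde{\Lambda}$, which act trivially on internal degrees of freedom and hence are $G$-equivariant, and the adiabatic interpolation $\vartheta_d=\Theta_{Y_R^c}\circ\alpha(\Pi_{Y_R}^{\star}(\sfG)\,; t,1)$, whose grading/$G$-compatibility was secured above; in particular the decay estimate \eqref{eqn:theta.decay} and the sheaf $\cS\bsIP_{d+1}$ of \cref{defn:swithcing.IP.sheaf} make sense in the graded/equivariant category. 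This produces a zig-zag of ($G$-equivariant) weak equivalences $\bsfIP_d\to\cS\bsfIP_{d+1}\leftarrow\Omega\bsfIP_{d+1}$ through which $\boldsymbol{\kappa}_d$ factors, and likewise for the $(G,\phi)$-sheaves. For the equivariant conclusion I would invoke Elmendorf's theorem as recalled after \cref{thm:equivariant.MW}: a $G$-equivariant morphism of $G$-sheaves induces a $G$-homotopy equivalence of $G$-realizations once it is a weak equivalence on every $H$-fixed-point subsheaf for $H\le G$ closed, and the $H$-fixed-point subsheaf of $\bsIP_{\fR,d}$ is precisely the sheaf of $H$-invariant IG UAL Hamiltonians, to which the entire preceding argument applies with $G$ replaced by $H$. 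Passing to realizations and applying \cref{thm:MW} and \cref{thm:equivariant.MW} then yields that $\{\fIP_{\fR,d},\kappa_d\}$, resp.\ $\{\IP_{\fR,d},\kappa_d\}$, is a naive $G$-spectrum. The main obstacle is purely bookkeeping: confirming that the specific cut-off automorphisms of \cref{prp:Eilenberg.swindle} and \cref{lem:switch.constant.equivalence} are simultaneously compatible with the grading, the $(G,\phi)$-action, and the uniform estimates of \cref{prp:Lieb.Robinson.Ck}, \cref{lem:LGA.cone.decomposition} and \cref{rmk:cutoff.approximate}, so that nothing in the graded/equivariant enhancement of \cref{defn:swithcing.IP.sheaf} degenerates.
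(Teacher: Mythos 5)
Your proposal is correct and takes essentially the same approach as the paper, which disposes of this corollary with a one-line remark that the proofs of \cref{thm:inverse.pump}, \cref{prp:Eilenberg.swindle}, and \cref{cor:spectrum} carry over verbatim to the fermionic and $(G,\phi)$-equivariant settings. Your expansion correctly identifies the three ingredients that need checking (the product $\boxtimes$, the flip homotopy through $\bR[\fS_4]$ — which the paper explicitly notes is even and commutes with the $\phi$-twisted $G$-action — and the graded/equivariant versions of the Lieb--Robinson and automorphic-equivalence estimates from \cref{subsection:automorphic.equivalence} onward), and your invocation of the Elmendorf theorem via the fixed-point subsheaves $\bsIP_{\fR,d}^H$ matches the mechanism the paper sets up in the remark following \cref{thm:equivariant.MW}. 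One small terminological slip: the $H$-fixed-point subsheaf of $\bsIP_{\fR,d}$ consists of $H$-invariant \emph{triples} $(\sfH,\check{\sfH},\overline{\sfH})$, not merely $H$-invariant IG UAL Hamiltonians — the paper's remark after \cref{lem:bold.invertible.sheaf} warns that $\sIP_{\fR,d}^H$ and $\bsIP_{\fR,d}^H$ need not agree — but since you reapply the whole zig-zag of weak equivalences to each $\bsIP_{\fR,d}^H$ directly, this does not affect the validity of the argument.
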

Here, the adjective `naive' means that the $G$-equivariant connecting map is not defined for $\IP_d \to \Omega_V \IP_{d+V}$ for non-trivial representations $V$ of $G$ (cf.\ \cite{mayEquivariantHomotopyCohomology1996}*{p.~141}).

\begin{lem}\label{lem:equivariant.spectrum}
Let $\fR$ be a set of internal degrees of freedom and let $\fI$ be a subset such that the prime factors of $\{ \dim \sH_\lambda \}_{\lambda \in \fR}$ and those of $\{ \dim \sH_\lambda \}_{\lambda \in \fI}$ generate the same multiplicative subgroup of $\bQ^\times$. 
Then, the inclusion $\sIP_{\fI,d} \to \sIP_{\fR,d}$ is a morphism of $G$-sheaves that is non-equivariantly weakly equivalent. 
\end{lem}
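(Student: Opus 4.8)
The plan is to reduce the claim to the non-equivariant statement that enlarging the set of internal degrees of freedom to any $\fR'$ with the same prime-factor group is a weak equivalence, and then to invoke \cref{lem:bold.invertible.sheaf} together with the colimit description of $\sGP_{\fR,d}$. First I would observe that the inclusion $\sIP_{\fI,d}\to\sIP_{\fR,d}$ underlies an inclusion of the refinement sheaves $\bsIP_{\fI,d}\to\bsIP_{\fR,d}$, so by \cref{lem:forget.invertible} it suffices to prove the corresponding statement for $\bsIP$, and by \cref{thm:MW} it suffices to show the map is an isomorphism on $\sIP_{\bullet,d}[\sM]$ for every compact manifold $\sM$. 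Since every section over a compact $\sM$ is supported on a single $\Lambda\in\fL_{\fR,d}$, both surjectivity and injectivity come down to the following local statement: given $\sfH\in\sGP(\Lambda\midbar\sM)$ for $\Lambda\in\fL_{\fR,d}$, after taking the composite with enough copies of the trivial Hamiltonian, one can find $\Lambda'\in\fL_{\fI,d}$ and an injective linearly proper large-scale Lipschitz map over the prime-factor group so that $\sfH\boxtimes\sfh$ is pushed to a section on $\Lambda'$, in a way that is canonical up to smooth homotopy by \cref{lem:flip.homotopy.use,exmp:Lipschitz.homotopy}.

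The key input is the arithmetic of dimensions. If $\lambda\in\fR$ has $\dim\sH_\lambda=n$ and the prime factors of $n$ already occur among the prime factors of the dimensions appearing in $\fI$, then $M_n(\bC)$ embeds unitally into a (finite) tensor product $\bigotimes_{j}\cA_{\mu_j}$ with $\mu_j\in\fI$: writing $n=\prod p_i^{e_i}$, each $p_i$ divides some $\dim\sH_{\mu}$ with $\mu\in\fI$, and a unital embedding $M_{p_i}(\bC)\hookrightarrow M_{\dim\sH_\mu}(\bC)$ exists because $p_i\mid\dim\sH_\mu$; taking tensor products of such embeddings and sending $\Omega_\lambda$ to the corresponding product of the fixed unit vectors gives a unital $\ast$-homomorphism $\cA_\lambda\to\bigotimes_j\cA_{\mu_j}$ carrying $\sfh_\lambda$ into the composite trivial Hamiltonian. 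Doing this simultaneously at every lattice point $\bm x\in\Lambda$ and using the bijection $\fm$ to fold the extra tensor factors into a single element of $\bN$, one obtains $\Lambda'\in\fL_{\fI,d}$ with $l_{\Lambda'}=l_\Lambda$, the same $\bR^{d+l}$-coordinates, hence the same coarse geometry, and an injective large-scale Lipschitz map $F\colon\Lambda\to\Lambda'$ over $\fI$ which is moreover linearly proper since $\pr_{\bR^d}$ is unchanged. By \cref{lem:induced.map.GP} this yields a morphism $F_*\colon\sGP(\Lambda\midbar\blank)\to\sGP(\Lambda'\midbar\blank)$ sending $\sfH$ to a section whose underlying lattice lies in $\fL_{\fI,d}$; applying this to $\sfH$, $\check{\sfH}$ and $\overline{\sfH}$ of a given $\bsfH$ produces the desired preimage in $\bsIP_{\fI,d}$, up to a reshuffling of tensor factors that is immaterial by \cref{rmk:Ainfty}. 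This gives surjectivity; for injectivity one applies the same construction to a null-homotopy living over $\fR$ and then uses \cref{lem:flip.homotopy.use} to identify $F_*\sfH$ with $\sfH$ inside $\bsIP_{\fR,d}$ (the two inclusions $\Lambda\hookrightarrow\Lambda\boxtimes\Lambda'$ being near each other after padding with trivial factors).

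Finally I would note that $F$ and the embeddings $\phi_{\bm x}$ are $G$-equivariant by construction: one chooses the representatives $\mu_j\in\fI$ and the unital embeddings $M_{p_i}(\bC)\hookrightarrow M_{\dim\sH_{\mu}}(\bC)$ to be compatible with the $(G,\phi)$-actions, which is possible because in the definition of an internal degree of freedom with on-site $(G,\phi)$-symmetry one is free to tensor with any finite-dimensional representation, and the fixed unit vectors $\Omega_\mu$ are $G$-invariant (resp.\ even). Hence $F_*$ is in fact a morphism of $G$-sheaves, so the statement ``$\sIP_{\fI,d}\to\sIP_{\fR,d}$ is a morphism of $G$-sheaves'' is immediate and only the \emph{non-equivariant} weak equivalence needs the argument above. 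The main obstacle I anticipate is bookkeeping rather than conceptual: one must verify that the chosen embeddings can be made uniform in $\bm x\in\Lambda$ so that $F$ is genuinely large-scale Lipschitz and weakly uniformly discrete (this is where the finite bound $n_\Lambda$ on the $\bN$-coordinate and the polynomial-growth estimates of \cref{subsection:coarse.lattice} are used), and that the resulting $F_*$ respects the almost-local Fr\'echet topology — but all of this is already packaged in \cref{lem:induced.map.GP}, so the genuine work is confined to the elementary number theory of prime factorizations of the dimensions together with the near-maps homotopy \cref{lem:flip.homotopy.use}.
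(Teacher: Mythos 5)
There is a genuine gap in the central construction, and a secondary confusion about equivariance.

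\medskip

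\noindent\textbf{Main gap.} You propose a unital $*$-embedding $\phi_{\bm{x}}\colon \cA_\lambda\to\bigotimes_j\cA_{\mu_j}$ with $\mu_j\in\fI$ and claim it carries $\sfh_\lambda$ to the composite trivial Hamiltonian $\sfh_{\mu_1}\boxtimes\cdots\boxtimes\sfh_{\mu_j}$. This is impossible unless $\dim\sH_\lambda=\prod_j\dim\sH_{\mu_j}$: a unital embedding $M_n(\bC)\hookrightarrow M_N(\bC)$ with $n<N$ sends the rank-one projection $\Omega_\lambda\otimes\Omega_\lambda^*$ to a rank-$(N/n)$ projection, so $\phi_{\bm{x}}(\sfh_\lambda)$ has an $(N/n)$-dimensional kernel, not a one-dimensional one. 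Consequently the pushforward has a degenerate ground state and is not a gapped UAL Hamiltonian in the paper's sense, and \cref{lem:induced.map.GP} cannot be invoked either: that lemma governs maps over $\fR$ whose induced maps on local algebras are \emph{isomorphisms} $\cA_{\lambda(\bm{x})}\xrightarrow{\cong}\cA_{\lambda(F(\bm{x}))}$, not proper unital embeddings. The correct move is to tensor $\sfH$ with trivial Hamiltonians over $\fI$ placed on auxiliary sites (using the $\bN$-direction) and then apply an \emph{isomorphism} $\cA_\lambda\otimes\cA_{\mu_1}\otimes\cdots\otimes\cA_{\mu_j}\cong\cA_{\nu_1}\otimes\cdots\otimes\cA_{\nu_k}$ with $\mu_i,\nu_i\in\fI$ and $\dim\sH_\lambda\cdot\prod_i\dim\sH_{\mu_i}=\prod_i\dim\sH_{\nu_i}$, chosen to carry $\Omega_\lambda\otimes\bigotimes_i\Omega_{\mu_i}$ to $\bigotimes_i\Omega_{\nu_i}$; such a tuple exists because of the prime-factor hypothesis. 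Your surrounding framework --- passing to $\bsIP$ via \cref{lem:forget.invertible}, using \cref{thm:MW}, and appealing to \cref{lem:flip.homotopy.use} for the canonicity up to homotopy of the choices --- is the right one and is indeed the content of the paper's one-line proof citing \cref{lem:flip.homotopy.use}.

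\medskip

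\noindent\textbf{Secondary point.} You assert that the $\phi_{\bm{x}}$ can always be chosen $G$-equivariantly. That cannot hold in general and is precisely why the lemma claims only a \emph{non}-equivariant weak equivalence. An isomorphism $\cA_\lambda\otimes\cA_{\mu_1}\otimes\cdots\cong\cA_{\nu_1}\otimes\cdots$ matching dimensions need not respect the $(G,\phi)$-representation structure: $\sH_\lambda$ will typically not factor as a tensor product of representations appearing in $\fI$ even when the dimensions match (e.g.\ when $\fI$ consists only of trivial representations). Were the isomorphism always available $G$-equivariantly, the inclusion would be a $G$-weak equivalence, and the subsequent discussion around \eqref{eqn:split}, which replaces strict fixed points by homotopy fixed points, would be vacuous.
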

\begin{proof}
    It is clear from the definition that the inclusion is $G$-equivariant. It follows from \cref{lem:flip.homotopy.use} that the inclusion is a weak equivalence.
\end{proof}
The typical case in which we apply this lemma is when $\fI$ consists of trivial representations.
In this case, \cref{lem:equivariant.spectrum} means that $\IP_d$ is a split $G$-spectrum in the sense of \cite{mayEquivariantHomotopyCohomology1996}*{p.197, Definition XVI.2.1}. 
As a consequence, for any free $G$-space $\sX$, a Mayer--Vietoris argument shows that the induced map
\begin{align}
    \Hom (\sX/G,\IP_{\fI,d}) = \Hom (\sX,\IP_{\fI,d})^G \to \Hom (\sX,\IP_{\fR,d})^G \label{eqn:split.pre}
\end{align}
is a weak equivalence (cf.\ \cite{mayEquivariantHomotopyCohomology1996}*{Theorem XVI.2.4}). 
This is a simple yet effective application of \cref{cor:spectrum}. 

As will be discussed in \cref{section:homotopy}, we define the cohomology functors $\rIP^d \colon \kTop \to \mathsf{Ab}$ and $\rIP^d_G \colon \kTop_G \to \mathsf{Ab}$ corresponding to the IP-spectra as 
\begin{align*}
    \rIP^d(\blank ) \coloneqq [\blank, \IP_{\fR,d}], \quad \rIP^d_G(\blank ) \coloneqq [\blank, \IP_{\fR,d}]^G,
\end{align*}
in which the choice $\fR$ is abbreviated just for simplicity of notation. 
With this notation, \eqref{eqn:split.pre} is rewritten as
\begin{align} 
    \rIP_{\fJ,d}(\sX/G) = [\sX/G  , \IP_{\fJ,d}] \cong [\sX , \IP_{\fR,d}]^G =\rIP_{\fR,d}^G(\sX) .\label{eqn:split}
\end{align}
In particular, for any free $G$-manifold $\sM$, the homotopy sets of smooth sections are compared as
\begin{align*} 
    \sIP_{\fJ,d}[\sM/G] \cong \rIP^d_G(\sM) \cong \rIP^d(\sM/G) \cong \sIP_{\fI,d}[\sM /G]. 
\end{align*}

When $\phi$ is non-trivial, the right hand side of the above isomorphism should be replaced with the IP-cohomology group with twisted coefficient. Here, the twisted $\IP$-cohomology group with the twisting data $\phi \in \mathrm{H}^1(\sM\,; \bZ/2) \cong \Hom(\pi_1(\sM), \bZ/2)$ is defined by
\[
    \prescript{\phi}{}{\rIP}^{d}(\sX) \coloneqq [\sX_{\phi}, \IP_{\fR,d}]^{\bZ/2},
\]
where $\sX_\phi$ denotes the double covering associated with $\phi$ and $\fR$ is a set of internal degrees of freedom with on-site $(\bZ/2,\phi)$-symmetry. 
Assume that a set of internal degrees of freedom $\fR$ has a subset $\fI$ consisting of $\phi$-trivial representation, which satisfies the assumption of \cref{lem:equivariant.spectrum}. Here, we say that a $\phi$-twisted representation is $\phi$-trivial if it factors through $\phi \colon G \to \bZ/2$. Then, by the same argument as the above paragraph, 
\begin{align*}
    \Hom (\sX/G_0,\IP_{\fI,d})^{\bZ/2}  = \Hom (\sX,\IP_{\fI,d})^G \to \Hom (\sX,\IP_{\fR,d})^G
\end{align*}
is a weak equivalence for any free $G$-space $\sX$, where $G_0\coloneqq \mathop{\mathrm{Ker}} \phi$. Thus, we get the isomorphism 
\begin{align*} 
     \prescript{\phi}{}{\IP}^d(\sX/G) = [\sX/G_0 , \IP_{\fJ,d}]^{\bZ/2} \cong [\sX , \IP_{\fR,d}]^{G} = \prescript{\phi}{}{\IP}_G^d(\sX) .
\end{align*}
In particular, for any free $G$-manifold $\sM$, the homotopy sets of smooth sections are compared as
\begin{align*} 
    \prescript{\phi}{}{\sIP}_{\fJ,d}[\sM/G] \cong \prescript{\phi}{}{\rIP}^d(\sM/G) \cong \prescript{\phi}{}{\IP}_{\fR,d}^G(\sM) \cong \prescript{\phi}{}{\sIP}_{\fR,d}^G[\sM]. 
\end{align*}

The above observation also relates the equivariant IP cohomology with the associated Borel equivariant cohomology. 
\begin{defn}\label{defn:G.index}
    Let $\fR$ be a set of internal degrees of freedom with on-site $(G,\phi)$-symmetry that has a subset $\fI$ consisting of $\phi$-trivial representations such that the prime factors of $\{ \dim \sH_\lambda \}_{\lambda \in \fR}$ and those of $\{ \dim \sH_\lambda \}_{\lambda \in \fI}$ generate the same multiplicative subgroup of $\bQ^\times$. For any $G$-manifold $\sM$, the group homomorphism $\mathrm{Ind}_{G,\phi, \sM}^d$ is defined by 
\begin{align*} 
    \mathrm{Ind}_{G,\phi, \sM}^d \colon \prescript{\phi}{}{\sIP}_{\fR,d}^G[\sM] \cong \prescript{\phi}{}{\rIP}^d_G(\sM) \xrightarrow{\mathrm{pr}_{\sM}^*} \prescript{\phi}{}{\rIP}^d_G (\sM \times EG) \cong  \prescript{\phi}{}{\rIP}^d(\sM \times_{G} EG). 
\end{align*}
In particular, in the case of $\sM =\pt$, we obtain the group homomorphism
\begin{align*}
    \mathrm{Ind}_{G,\phi}^d \colon {}&{} \prescript{\phi}{}{\rIP}^d_G(\pt) \to \prescript{\phi}{}{\rIP}^d(BG),
\end{align*}
which we call the \emph{SPT index}. The fermionic version is also defined similarly. 
\end{defn}

\section{Applications to topological phases}\label{section:homotopy}
By \cref{cor:spectrum}, the functors
\begin{align*}
    &&\rIP^d(\sX )\coloneqq{}&{}  [\sX,\IP_d] , & \rIP^d(\sX , \sA )\coloneqq {}&{}  [(\sX, \sA),(\IP_d , \sfh)],&&\\
    &&\rfIP^d(\sX )\coloneqq{}&{}  [\sX,\fIP_d] , & \rfIP^d(\sX , \sA )\coloneqq {}&{} [(\sX, \sA),(\fIP_d , \sfh)],&&
\end{align*}   
form generalized cohomology theories whose domains are the category $\kTop^2$ of pairs of compactly generated Hausdorff spaces. 
If $\sX$ is a manifold, then each of these groups is isomorphic to the smooth homotopy sets of the sheaves $\sIP_d$ and $\sfIP_d$. 
In this section, we summarize what we have already known about the groups $\IP^d(\pt)$ and $\fIP^d(\pt)$, in anticipation of some results in the subsequent paper \cite{kubotaStableHomotopyTheory2025b}, and their applications to the determine the group of SPT phases.

\subsection{High degree homotopy groups}\label{subsection:homotopy.higher}
In low dimensions, $d =0, 1$, the homotopy types of the spaces $\IP_d$ are determined. 
In particular, the homotopy groups $\pi_n(\IP)=[S^n,\IP]$ of the $\Omega$-spectrum $\IP$ is determined for $n \geq -1$. 
The $\Omega$-spectra $\IP$ and $\fIP$ are $2$-truncated, namely, their homotopy groups of degree $n >2$ vanish. 

\subsubsection{Nonnegative degrees}
Hereafter, for an abelian group $G$, $K(G,n)$ denotes the Eilenberg--Mac Lane space whose unique non-trivial homotopy group is $\pi_n(K(G,n)) \cong G$. 
\begin{prp}\label{prp:0d}
    There are weak equivalences $\IP_0 \simeq K(\bZ,2)$ and $\fIP_0 \simeq K(\bZ,2) \times K(\bZ/2,0)$.
\end{prp}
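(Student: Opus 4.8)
The plan is to identify the $0$-th space of the IP-spectrum with the classifying space of the monoid of $0$-dimensional invertible phases, which is (essentially) the space of finite-dimensional Hilbert spaces with a distinguished vacuum vector, up to the appropriate notion of homotopy. First I would compute $\pi_0(\IP_0)$: a $0$-dimensional lattice $\Lambda \in \fL_0$ is a weakly uniformly discrete subset of $\bR^{\infty}\times\fR\times\bN$ with \emph{bounded} projection to $\bR^0 = \pt$, hence after a large-scale Lipschitz homotopy (using \cref{cor:coarse.homotopy.Hamiltonian} and the contractibility-type arguments of \cref{exmp:Lipschitz.homotopy}) one can collapse $\Lambda$ to a single point. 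A gapped UAL Hamiltonian on a point is just a self-adjoint operator on some $\cB(\sH_\lambda)$ with a non-degenerate gapped ground state, i.e.\ a rank-one projection onto a line $\ell \subset \sH_\lambda$; invertibility under $\boxtimes$ is automatic after stabilizing by $\sfh$ and its inverse (any one-dimensional phase is invertible since $\sH_\lambda\otimes\overline{\sH_\lambda}$ contains a canonical vector). So $\sIP_0$ is weakly equivalent, as a local commutative H-monoid, to the sheaf whose sections over $\sM$ are (stabilized) smooth families of pairs (finite-dimensional Hilbert space bundle, line subbundle), and $\pi_0(\IP_0)$ is the group completion of $\bigsqcup_n \{\text{lines in }\bC^n\}/\!\!\sim \;=\; \bigsqcup_n \bC\bP^{n-1}$, which is trivial; more to the point the \emph{space} $\IP_0$ should be $\colim_n \bC\bP^{n-1} = \bC\bP^\infty$.

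The key steps, in order: (1) Use \cref{prp:coarse.homotopy} / \cref{cor:coarse.homotopy.Hamiltonian} and \cref{lem:weak.equivalence.stick}-type reasoning to reduce $\sIP_0$ to the sheaf of IG UAL Hamiltonians supported on a single point of $\fR\times\bN$, i.e.\ to a sheaf built out of $\bigsqcup_{\lambda\in\fR}\bigl(\text{unit sphere of }\sH_\lambda\bigr)/U(1)$ with the tensor-product H-monoid structure and the colimit over the directed system given by $\boxtimes \sfh$. (2) Show that the classifying space of this H-monoid, i.e.\ its realization as a sheaf on $\Man$, is $\bC\bP^\infty \simeq K(\bZ,2)$: the monoid of (stabilized) lines under tensor product is a model for $B U(1)$ because tensoring lines corresponds to multiplying in $U(1)$, and the group completion / $\Omega B$-construction of $\bigsqcup_n \bC\bP^{n-1}$ under $\otimes$ is $\bZ\times \bC\bP^\infty$ with the $\bZ$ killed by the invertibility/stabilization passing to $\bsIP_0$ (cf.\ \cref{lem:forget.invertible}, \cref{lem:bold.invertible.sheaf}). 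Concretely one checks $\pi_1 = 0$ (a smooth loop of lines in $\bC^n$ lifts, after stabilizing, to a loop in $U(1)$ which bounds in $U(1)\subset U(n+1)$), $\pi_2 = \bZ$ (detected by the first Chern class of the tautological line bundle over $\bC\bP^\infty = \colim \bC\bP^{n}$), and $\pi_k = 0$ for $k\geq 3$ by the homotopy groups of $\bC\bP^\infty$. (3) For the fermionic case, the same reduction applies but now the single-point data is a $\bZ/2$-graded Hilbert space $\widehat\sH_\lambda$ with an \emph{even} distinguished vacuum vector; a gapped ground state line $\ell$ need no longer be even, but invertibility in $\ffH^{\al}$ and \cref{lem:fermion.onsite.invariant.analysis}(2) force $\omega_{\sfH}$ to be even, hence $\ell$ lies in $\widehat\sH_\lambda^{(0)}\oplus$ (a purely odd line) — the parity of $\ell$ is a well-defined $\bZ/2$-invariant, additive under graded tensor product, giving the extra $K(\bZ/2,0)$ factor; the even part contributes exactly the bosonic $K(\bZ,2)$ as before, and the splitting $\fIP_0\simeq K(\bZ,2)\times K(\bZ/2,0)$ follows because the parity invariant and the Chern class are independent and the space of even-vacuum lines of each parity is contractible onto a copy of $\bC\bP^\infty$.

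The main obstacle I expect is step (2): making rigorous the identification of the \emph{sheaf-theoretic} realization of the H-monoid of stabilized lines with $\bC\bP^\infty$, rather than merely computing its homotopy groups. One must check that the colimit over $\fL_0$ together with the weakly strict unit structure from \cref{lem:H-monoid} does not introduce extra components or higher homotopy, that the various large-scale Lipschitz homotopies collapsing $0$-dimensional lattices are compatible with the H-monoid structure (so that one genuinely gets a weak equivalence of sheaves and not just a bijection on $\pi_0$ for each test manifold), and that passing to the refinement $\bsIP_0$ via \cref{lem:bold.invertible.sheaf} does not change the homotopy type. A clean way around this is to produce an explicit morphism of sheaves from $\bsIP_0$ to the sheaf $C^\infty(\blank,\bC\bP^\infty)$ — send a family of gapped ground states, after stabilization, to its ground-state line viewed as a line subbundle of a trivial $\bC^\infty$-bundle — and then verify it is a weak equivalence by \cref{thm:MW} using the homotopy-group computation above together with surjectivity on $\pi_0,\pi_2$ and the Eilenberg swindle / stabilization to kill everything else; the fermionic analogue then targets $\bC\bP^\infty \sqcup \bC\bP^\infty \simeq K(\bZ,2)\times K(\bZ/2,0)$ by recording the parity component.
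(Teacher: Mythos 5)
Your proposal is correct and takes essentially the same approach as the paper: identify $\sIP_0$ with the colimit of the projective spaces $\bP\cA_\Lambda$ of rank-one ground-state projections, check that the stabilization maps $p \mapsto p \otimes p_0$ are $\pi_2$-isomorphisms to obtain $\bC\bP^\infty \simeq K(\bZ,2)$, and in the fermionic case split by the parity of the even rank-one projection to get the extra $K(\bZ/2,0)$ factor. Two small simplifications you could make: your step (1) is unnecessary, since for $d=0$ the linear-properness of $\pr_{\bR^0}$ already forces $\Lambda$ to be a \emph{finite} set, so $\cA_\Lambda$ is a single matrix algebra and no coarse-geometric collapse is required; and the digression about group completion producing $\bZ \times \bC\bP^\infty$ is a red herring --- what appears here is a filtered colimit along the stabilization maps, not an $\Omega B$-construction, so no extra $\bZ$ ever arises and nothing needs to be ``killed.''
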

\begin{proof}
    The sheaf $\sIP_0$ is weakly equivalent to the colimit of smooth manifolds $\bP\cA_{\Lambda}$, where $\bP\cA_{\Lambda}$ is the space of rank $1$ projections in the matrix algebra $\cA_{\Lambda}$. Since $\bP\cA_{\Lambda}$ is the complex projective space and the map 
    \begin{align*} 
    \bP\cA_{\Lambda_1} \to \bP(\cA_{\Lambda_1 \boxtimes \Lambda_2}), \quad p \mapsto p \otimes p_0 
    \end{align*}
    induces an isomorphism of the second homotopy groups, we obtain $\sIP_0 \simeq K(\bZ,2)$. Similarly, $\sfIP_0$ is weakly equivalent to the colimit of the manifolds $\widehat{\bP}\widehat{\cA}_{\Lambda}$ of even rank $1$ projections in $\widehat{\cA}_{\Lambda}$, which is identified with two copies of complex projective spaces (consisting of even and odd lines respectively). Hence $\fIP_0 \simeq K(\bZ,2) \sqcup K(\bZ,2) = K(\bZ,2) \times K(\bZ/2,0)$.
\end{proof}
\cref{prp:0d} means that $\pi_2(\IP_0) \cong \bZ$ is the only non-trivial positive degree homotopy group. This isomorphism is given by the Berry curvature class. 
A remarkable consequence of \cref{cor:spectrum} is that, for any $d \in \bZ_{\geq 0}$, the spaces $\IP_d$ and $\fIP_d$ have the trivial $\pi_n$-group all but finitely many $n \in \bN$. 
The homotopy type of spaces of this kind is characterized by a finite set of the Postnikov $k$-invariants. A part of $k$-invariants of $\IP$ and $\fIP$ are discussed later.

\subsubsection{Degree \texorpdfstring{$-1$}{-1}}\label{subsection:1d}
We determine the homotopy groups of $1$-dimensional topological phases $\IP_1$ and $\fIP_1$. 
The topology of quantum spin chains, i.e., 1-dimensional quantum spin systems, has been studied in \cites{ogataClassificationGappedHamiltonians2019,kapustinClassificationInvertiblePhases2021,carvalhoClassificationSymmetryProtected2024} for the bosonic case and \cite{bourneClassificationSymmetryProtected2021} for the fermionic case. 
They conclude that any bosonic (invertible) gapped quantum spin chain can be connected to the trivial phase, while the fermionic invertible phases are classified by an $\bZ/2 $-valued index. 
Since we are dealing with a broader class of $1$-dimensional lattices, we reprove these results in our framework. 
The following argument has much overlap with the paper \cite{carvalhoClassificationSymmetryProtected2024}. 

Let $\Lambda \in \fL_{\fR,1}$. 
Continuing from \cref{subsection:bulk.boundary}, we use the notation $Y_L \coloneqq \pr_{\bR}^{-1}(\{x \leq 0 \} )$. 
Moreover, we write $Y_R \coloneqq \pr_{\bR}^{-1}(\{ x > 0\})$, which differs slightly from the previous notation. 
Note that the pair $(Y_L,Y_R)$ is linearly coarse transverse in the sense of \cref{defn:linearly.coarsely.transverse}, and hence \cref{prp:left.equivalent.state.unitary} is applicable.

\begin{prp}\label{prp:1d}
    The following hold:
    \begin{enumerate}
        \item The space $\pi_0(\IP_1 ) \cong 0$ holds.
        \item Suppose that the internal degrees of freedom $\fR$ contain $(\hat{\sH}_\lambda,\Omega_\lambda)$ such that $\dim \sH_\lambda^{(0)}=\dim \sH_\lambda^{(1)}$. Then the isomorphism $\pi_0(\fIP_1) \cong \bZ/2$ holds. 
    \end{enumerate} 
\end{prp}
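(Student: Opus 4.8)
The plan is to show that for a $1$-dimensional lattice $\Lambda \in \fL_{\fR,1}$ with its linearly coarsely transverse decomposition $\Lambda = Y_L \sqcup Y_R$, any invertible gapped UAL Hamiltonian $\sfH$ is smoothly homotopic to the trivial one (up to stabilization by $\sfh$), the obstruction in the fermionic case being a single $\bZ/2$-index measuring the parity discrepancy across the cut. The starting point is \cref{prp:left.equivalent.state.unitary} (resp.\ its fermionic analogue \cref{prp:left.equivalent.state.unitary.fermion}): after tensoring with $\sfh$, there is an almost local unitary $u \in \cA_\Lambda^{\al}$ (resp.\ an even one in $\widehat{\cA}_\Lambda^{\al}$, under the additional hypothesis $\dim \sH_\lambda^{(0)} = \dim \sH_\lambda^{(1)}$ which guarantees the needed type I factor structure $\pi_\omega(\cA_{Y_L})'' \cong \cB(\sH)$ rather than $\cB(\sH)\hotimes\bC\ell_1$) such that $\omega_{\sfH} \circ \Ad(u) = \omega_L \otimes \omega_R$ splits as a tensor product of pure states on $\cA_{Y_L}$ and $\cA_{Y_R}$. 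The key point is that $\Ad(u)$ is an LG automorphism (since $u \in \cA_\Lambda^{\al}$, write $u$ as a limit of local unitaries as in \cref{rmk:almost.local.unitary} and realize $\Ad(u)$ as a parallel transport), so replacing $\sfH$ by $\Ad(u^*)$ applied to a split Hamiltonian does not change the smooth homotopy class in $\sIP_1$.

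First I would make precise that it suffices to treat $\sfH$ with a split ground state $\omega = \omega_L \otimes \omega_R$. Given this, I would build a gapped UAL Hamiltonian $\sfH' = \sfH_L \boxtimes \sfH_R$ that is ``boundary-gapped'' in the sense of \cref{subsection:bulk.boundary}: take $\sfH_L$ to be a frustration-free parent Hamiltonian supported on $Y_L$ whose ground state is $\omega_L$ (and similarly $\sfH_R$), which exists by running the truncation construction $\Theta_{Y_L,\omega_R}$ of \cref{prp:cut.trivial.Hamiltonian} with $Y = Y_L$, $\omega = \omega_R = \omega|_{\cA_{Y_R}}$. Then \cref{prp:cut.trivial.Hamiltonian} (second part) gives that $\sfH$ and $\Theta_{Y_L,\omega_R}\sfH \boxtimes \Theta_{Y_R,\omega_L}\sfH$ have the same ground state, hence are smoothly homotopic. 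Now $\sfH$ is smoothly homotopic to a Hamiltonian supported on $Y_L$ boxtimes one supported on $Y_R$. But by \cref{prp:Eilenberg.swindle}, the subsheaves $\bsIP_1(\blank)_L$ and $\bsIP_1(\blank)_R$ of IG UAL Hamiltonians supported on $Y_L$, resp.\ $Y_R$, are weakly contractible via an Eilenberg swindle using the truncated Kitaev pump. Therefore each factor is null-homotopic, and so is their composite, giving $[\sfH] = [\sfh] \in \pi_0(\IP_1) = \sIP_1[\pt]$. (One should be slightly careful: invertibility of $\sfH$ is used to invoke \cref{prp:left.equivalent.state.unitary}, which is where Matsui's entanglement-entropy finiteness is bypassed.) Thus $\pi_0(\IP_1) \cong 0$.

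For the fermionic case, the same scheme applies with $\widehat{\cA}_\Lambda$, even unitaries, and \cref{prp:left.equivalent.state.unitary.fermion}, \emph{provided} the split hypothesis $\pi_\omega(\widehat{\cA}_{Y_L})'' \cong \cB(\sH)$ holds; the hypothesis $\dim\sH_\lambda^{(0)} = \dim\sH_\lambda^{(1)}$ ensures that the stabilized internal degrees of freedom realize both graded type I factors, so one can always arrange (after tensoring with $\sfh$, which is even by construction) the even case or detect the odd case. The $\bZ/2$-index is then defined as follows: the split property for an IG fermionic Hamiltonian always holds (\cref{lem:split.1d}) but $\pi_\omega(\widehat{\cA}_{Y_L})''$ is a graded type I factor which is either balanced (giving the nontrivial class, the Clifford shift $\bC\ell_1$) or not; assign $\mathrm{ind}(\sfH) \in \bZ/2$ accordingly. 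The plan is to show (i) $\mathrm{ind}$ is a homotopy invariant — because $\omega_{\sfH}$ varies continuously and the type of a factor is locally constant under the automorphic equivalence \cref{thm:automorphic.equivalence}; (ii) $\mathrm{ind}$ is additive under $\boxtimes$ — because $\pi_\omega(\widehat{\cA}_{Y_L})''$ tensorizes and $\bC\ell_1 \hotimes \bC\ell_1 \cong M_2(\bC)$ is again balanced-trivial, matching the group law of $\bZ/2$; (iii) $\mathrm{ind}$ is surjective — exhibit a Kitaev-chain-type model (a dimerized Majorana Hamiltonian) with nontrivial index; (iv) $\mathrm{ind}$ is injective — if $\mathrm{ind}(\sfH) = 0$ then $\pi_\omega(\widehat{\cA}_{Y_L})'' \cong \cB(\sH)$, so \cref{prp:left.equivalent.state.unitary.fermion} applies, and the argument of part (1) goes through verbatim in the fermionic category (using the fermionic \cref{prp:Eilenberg.swindle}, which the paper notes is ``immediately made fermionic'') to conclude $[\sfH] = [\sfh]$.

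The main obstacle is step (iv), specifically verifying that the whole chain of constructions of \cref{subsection:adiabatic.interpolation} and \cref{subsection:bulk.boundary} — the truncation $\Theta_{Y,\omega}$, the adiabatic interpolation, and the Eilenberg-swindle contractibility — are genuinely available in the fermionic setting with all the quasi-locality and seminorm estimates intact; the paper asserts this is routine (``runs parallel''), but one must confirm that the graded conditional expectation $\Pi_Z = 1 \hotimes \mathrm{tr}_{Z^c}$ and the graded-commutator Lieb--Robinson bounds from \cref{lem:fermion.onsite.invariant.analysis} suffice to make $\Ad(u)$ for an \emph{even} almost local unitary $u$ into a fermionic LG automorphism fixing $\sfh$, and that $\Theta_{Y_L,\omega_R}$ preserves the $\bZ/2$-grading — the latter needs $\omega_R$ even, which is part of the output of \cref{prp:left.equivalent.state.unitary.fermion}. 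A secondary subtlety is bookkeeping the stabilizations by $\sfh$ inside $\colim_{\Lambda}$: one must check that the index and the null-homotopy are compatible with the inductive system defining $\sfGP_{\fR,1}$, which is straightforward since $\sfh$ is even and $\mathrm{ind}(\sfh) = 0$.
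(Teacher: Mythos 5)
Your proposal matches the paper's proof essentially step by step: split the ground state by an almost local unitary via \cref{prp:left.equivalent.state.unitary} (resp.\ \cref{prp:left.equivalent.state.unitary.fermion}), truncate via $\Theta_{Y_L,\omega}$, $\Theta_{Y_R,\omega}$ and \cref{prp:cut.trivial.Hamiltonian} to a boundary-gapped Hamiltonian, contract via the Eilenberg swindle \cref{prp:Eilenberg.swindle}, and in the fermionic case define $\cI^0$ by the graded type of $\pi_\omega(\widehat{\cA}_{Y_L})''$, with well-definedness from \cref{thm:automorphic.equivalence} and \cref{lem:LGA.cone.decomposition}, surjectivity from the Fidkowski--Kitaev model \cref{exmp:FK}, and injectivity by reducing to part~(1) via \cref{prp:left.equivalent.state.unitary.fermion}. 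The one misstatement worth correcting is your opening claim that the hypothesis $\dim\sH_\lambda^{(0)}=\dim\sH_\lambda^{(1)}$ ``guarantees'' $\pi_\omega(\widehat{\cA}_{Y_L})''\cong\cB(\sH)$: that isomorphism is precisely the condition $\cI^0(\sfH)=0$ and is a property of $\sfH$, not of $\fR$; the hypothesis on $\fR$ enters only to prove \emph{surjectivity} of $\cI^0$, by making a balanced Clifford factor $\bC\ell_2$ available so that \cref{exmp:FK} can be carried out, as you in fact state correctly later in your step~(iii).
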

\begin{proof}
    We first observe that $\pi_0(\IP_1) \cong \pi_0(\sIP_1) \cong 0$, which shows (1) by \cref{prp:0d}.  
    By \cref{prp:left.equivalent.state.unitary}, there is an almost local unitary $u \in \cA_{\Lambda}^{\al}$ such that the distinguished ground state of $u^*\sfH u$ decomposes into the tensor product as 
    \begin{align}
    \omega_{u^* \sfH u}= \omega_{\sfH} \circ \Ad (u)= \omega_L \otimes \omega_R,
    \label{eqn:1d.inner.split}
    \end{align}
    where $\omega_{\bullet}$ is a pure state on $\cA_{Y_{\bullet}}$ for $\bullet =L,R$. Now, let
    \begin{align*}
    \acute{\sfH} \coloneqq \acute{\sfH}_L \boxtimes \acute{\sfH}_R, \quad  \acute{\sfH}_L \coloneqq \Theta_{Y_L, \omega_R}(u^*\sfH_{\bm{x}} u), \quad \acute{\sfH}_R \coloneqq \Theta_{Y_R,\omega_L }(u ^* \sfH_{\bm{x}} u) 
    .
    \end{align*}
    By \cref{prp:cut.trivial.Hamiltonian}, $\acute{\sfH}_L$ and $\acute{\sfH}_R$ are gapped UAL Hamiltonians. 
    By \cref{rmk:almost.local.unitary}, $\sfH$ and $u^* \sfH u$ are connected by the smooth homotopy $\exp (-t \log u) \sfH \exp(t \log u)$. Moreover, the convex combination and the null-homotopy obtained by \cref{prp:Eilenberg.swindle} 
    gives a smooth homotopy $u^*\sfH u \simeq \acute{\sfH} \simeq \sfh$. 
    In conclusion, we obtain $[\sfH] =0 \in \pi_0(\sIP_1)$, that is, $\pi_0(\sIP_1) \cong 0$. 
    
    A similar argument shows that $\pi_0(\fIP_1 ) \cong \bZ/2$. 
    Following \cite{bourneClassificationSymmetryProtected2021}*{Definition 2.18}, we define
    \begin{align*}
    \cI^0 \colon \pi_0(\fIP_1) \to \bZ/2, \quad \cI^0(\sfH) = 
    \begin{cases}
        0 & \text{if $\pi(\widehat{\mathcal{A}}_{Y_L})'' \cong \cB(\sH)$, }\\
        1 & \text{if $\pi(\widehat{\mathcal{A}}_{Y_L})'' \cong \cB(\sH) \hotimes \mathbb{C}\ell_1$. }
    \end{cases}
    \end{align*}
    This index is a well-defined group homomorphism (\cite{bourneClassificationSymmetryProtected2021}*{Theorem 3.3}). Indeed, for a smooth path $\sfH \in \sfIP_1([0,1])$ connecting $\sfH_0 = \ev_0\sfH$ and $\sfH_1=\ev_1 \sfH$, we have an asymptotic equivalence of states 
    \[
    \omega_{\sfH_0}|_{\widehat{\cA}_{Y_L}} \sim_{\mathrm{asymp}} \omega_{\sfH_1} \circ \alpha \big( \Pi_{Y_L}(\sfG_{\sfH}) \,; 1 \big) |_{\widehat{\cA}_{Y_L}}
    \]
    by \cref{thm:automorphic.equivalence,lem:LGA.cone.decomposition}. 
    By \cite{bratteliOperatorAlgebrasQuantum1987}*{Corollary 2.6.11}, this implies that these states are quasi-equivalent.
    Therefore, the $\ast$-automorphism $\alpha \big( \Pi_{Y_L}(\sfG_{\sfH}) \,; 1 \big)^{-1}$ extends to the isomorphism of double commutant von Neumann algebras $\pi_{\omega_{\sfH_0}}(\widehat{\cA}_{Y_L})'' \cong \pi_{\omega_{\sfH_1}}(\widehat{\cA}_{Y_L})''$. 
    The surjectivity of $\cI^0$ follows from the subsequent \cref{exmp:FK}, in which the assumption on $\fR$ is used.  
    To see the injectivity, let $\sfH$ be an IG UAL Hamiltonian with $\cI^0(\sfH)=0$. Then \cref{prp:left.equivalent.state.unitary.fermion} applies to such $\sfH$, and hence the same argument as the above paragraph shows that $\sfH$ is smoothly homotopic to $\sfh$. 
\end{proof}

\begin{exmp}\label{exmp:FK}
    Fidkowski--Kitaev \cites{fidkowskiEffectsInteractionsTopological2010,fidkowskiTopologicalPhasesFermions2011} gives an example of $1$-dimensional fermionic UAL Hamiltonians with non-trivial index.
    Let $\fR$ be a set of fermionic internal degrees of freedom satisfying the assumption of \cref{prp:1d} (2). Then $\cA_{\lambda} \cong \cB(\sH_\lambda)$ is isomorphic to $\bC \ell_2 \otimes M_n(\bC)$, where $M_n(\bC)$ is trivially graded. Hence, by letting $\Lambda \coloneqq \bZ \times \{\lambda\} $, $\cA_{\Lambda}$ has a tensor component isomorphic to the infinite graded tensor product of $\bC \ell_2$, i.e., the CAR algebra $\cA_{\mathrm{CAR}}$. 
    The Fidkowski--Kitaev model is a Hamiltonian on $\cA_{\mathrm{CAR}}$. 
    It is regarded as an IG UAL Hamiltonian on $\cA_{\Lambda}$ by take the composite with a trivial Hamiltonian on the tensor complement $M_n(\bC) \subset \cA_{\lambda}$, given by a fixed rank $n-1$ projection $\sfh_{\lambda}' \in M_n(\bC)$.

    Let us consider a graded tensor decomposition $\cA_{\mathrm{CAR},\bm{x}} \cong \bC \ell_2 \cong \bC\ell_1 \hotimes \bC\ell_1=: \cB_{1,\bm{x}} \hotimes \cB_{2,\bm{x}}$ for each $\bm{x} \in \Lambda$. 
    Let $\sfH_{\bm{x}} \in \cB_{2,\bm{x}} \hotimes \cB_{1,\bm{x+1}} \cong M_2(\bC)$ be one of two even rank $1$ projections. 
    Then the family $ (\sfH _{\bm{x}} )_{\bm{x}} $ forms a gapped UAL Hamiltonian. 
    Moreover, it is invertible since we can apply \cref{prp:Eilenberg.swindle} to the composite system $\sfH \boxtimes \sfH$. Indeed, the local generator $\sfH_{\bm{x}} \boxtimes \sfH_{\bm{x}} \in \cB_{2,\bm{x}}^{\otimes 2} \hotimes \cB_{1,\bm{x+1}}^{\otimes 2}$ can be replaced homotopically with a local gapped Hamiltonian that is of the form $\sfH_{\bm{x},1} \boxtimes \sfH_{\bm{x},2}$, where $\sfH_{1,\bm{x}} \in \cB_{1,\bm{x+1}}^{\hotimes 2}$ and $\sfH_{2,\bm{x}} \in \cB_{2,\bm{x}}^{\hotimes 2}$ are gapped positive operators with $1$-dimensional kernels.   
    Finally, since $\omega_{\sfH}$ restricts to a pure state on $\cA_{Y_L} \hotimes \cB_{1,\bm{1}} = \mathop{\widehat{\bigotimes}}_{\bm{x} \in Y_L} (\cB_{2,\bm{x}} \hotimes \cB_{1,\bm{x+1}}) $, we obtain that
    \[ 
        \cB(\sH) \cong \pi_{\omega}(\cA_{Y_L} \hotimes \cB_{1,\bm{1}})'' \cong \pi_{\omega}(\cA_{Y_L})'' \hotimes \cB_{1,\bm{1}}.
    \]
    For the second isomorphism, we have used \cite{bourneClassificationSymmetryProtected2021}*{Lemma A.4}. This means that $\cI^0(\sfH)=1 $. 
\end{exmp}

\subsubsection{Postnikov truncation of the fermionic IP spectrum}\label{subsubsection:kinvariant}
\cref{prp:1d} concludes that $\IP_1 \simeq K(\bZ,3)$.  
We also determine the homotopy type of the space $\fIP_1$ of fermionic invertible phases. To this end, we use the following fact, which is proved in the subsequent paper.
\begin{thm}[{\cite{kubotaStableHomotopyTheory2025b}}]\label{rmk:KO}
    Araki's quasi-free second quantization \cite{arakiQuasifreeStatesCAR1970} gives rise to a morphism of $\Omega$-spectra 
    \[
    \mathrm{Q} \colon \Sigma^{-2}\mathit{KO} \to \fIP,
    \]
    which induces isomorphism of homotopy groups as $\pi_0(\mathit{KO}_{-1}) \cong \pi_0(\fIP_1) \cong \bZ/2$, $\pi_1 (\mathit{KO}_{-1}) \cong \pi_1(\fIP_1) \cong \bZ/2$, $\pi_2(\mathit{KO}_{-1}) \cong \pi_2(\fIP_1) \cong 0$, and $\pi_3( \mathit{KO}_{-1}) \cong \pi_3(\fIP_1) \cong \bZ$. 
\end{thm}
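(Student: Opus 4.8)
The statement to be proved is a claim about \cref{subsubsection:kinvariant}, namely that the space $\fIP_1$ is weakly equivalent to a specific Postnikov truncation, and that the map $\mathrm{Q}\colon \Sigma^{-2}\mathrm{KO}\to\fIP$ (whose construction is deferred to the companion paper) realizes the low-degree part of the homotopy groups. However, \cref{rmk:KO} as stated is essentially a citation to \cite{kubotaStableHomotopyTheory2025a}, so within the scope of the present paper the ``proof'' consists in reducing the homotopy-group computations to facts that are already available: the values $\pi_n(\fIP_1)$ for $n=0,1,2,3$, which must match $\pi_n(\Sigma^{-2}\mathrm{KO})=\pi_{n-2}(\mathrm{KO})=\mathrm{KO}^{-n+1}(\pt)$. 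The plan is therefore to (i) recall that $\pi_0(\fIP_1)\cong\bZ/2$ by \cref{prp:1d}(2), (ii) observe from \cref{cor:fermionic.equivariant.spectrum} and the $\Omega$-spectrum property (\cref{cor:spectrum}) that $\pi_n(\fIP_1)\cong\pi_{n-1}(\fIP_0)$, and $\pi_n(\fIP_0)$ is computed from $\fIP_0\simeq K(\bZ,2)\times K(\bZ/2,0)$ in \cref{prp:0d}, (iii) assemble these into the claim that $\pi_1(\fIP_1)\cong\bZ/2$, $\pi_2(\fIP_1)=0$, $\pi_3(\fIP_1)\cong\bZ$, and (iv) match these against the known bottom of the real $K$-theory spectrum $\pi_*(\mathrm{KO})=\bZ,\bZ/2,\bZ/2,0,\bZ,0,0,0,\dots$ (Bott periodicity), checking the degree shift so that $\pi_{n}(\Sigma^{-2}\mathrm{KO})=\pi_{n+1}(\mathrm{KO})$ agrees in degrees $0\le n\le 3$.

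The key steps, in order: First I would invoke \cref{prp:0d} to get $\fIP_0\simeq K(\bZ,2)\sqcup K(\bZ,2)$, so $\pi_0(\fIP_0)\cong\bZ/2$, $\pi_2(\fIP_0)\cong\bZ$, and all other homotopy groups vanish. Next, applying the weak equivalence $\sfIP_1\xrightarrow{\simeq}\Omega\sfIP_2$ iterated appropriately is not needed; rather the single weak equivalence $\kappa_0\colon\fIP_0\to\Omega\fIP_1$ from \cref{cor:fermionic.equivariant.spectrum}(1) gives $\pi_n(\fIP_1)\cong\pi_{n-1}(\fIP_0)$ for $n\ge 1$. This yields $\pi_1(\fIP_1)\cong\pi_0(\fIP_0)\cong\bZ/2$, $\pi_2(\fIP_1)\cong\pi_1(\fIP_0)=0$, $\pi_3(\fIP_1)\cong\pi_2(\fIP_0)\cong\bZ$, and $\pi_n(\fIP_1)=0$ for $n\ge 4$. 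Together with $\pi_0(\fIP_1)\cong\bZ/2$ from \cref{prp:1d}(2) and $\pi_n(\fIP_1)=0$ for $n<0$ (since $\fIP$ is $2$-truncated and the negative groups $\pi_0(\fIP_d)$ for $d\ge 2$ are not directly relevant to $\fIP_1$ — indeed $\pi_n(\fIP_1)$ for $n<0$ vanishes as $\pi_n(\fIP_1)=\pi_{n-1}(\fIP_0)=0$), we conclude the four-term homotopy group list in the statement. Finally, I would compare with $\mathrm{KO}$: $\pi_j(\Sigma^{-2}\mathrm{KO})=\pi_{j-2}(\mathrm{KO})=\mathrm{KO}_{j-2}(\pt)$, which in the relevant range $j=0,1,2,3$ equals $\mathrm{KO}_{-2},\mathrm{KO}_{-1},\mathrm{KO}_0,\mathrm{KO}_1$, i.e.\ $0,\bZ,\bZ/2,\bZ/2$ — wait, this requires the indexing convention $\mathrm{KO}_{-n}(\pt)=\mathrm{KO}^n(\pt)$ with $\mathrm{KO}^{-1}(\pt)=\bZ/2$; after fixing the convention so that the component $\Sigma^{-2}\mathrm{KO}$ has its $1$-space $(\Sigma^{-2}\mathrm{KO})_1=\mathrm{KO}_{-1}$, the groups $\pi_0,\pi_1,\pi_2,\pi_3$ of this space are $\bZ/2,\bZ/2,0,\bZ$, matching $\fIP_1$.

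The step I expect to be the genuine obstacle — and the reason \cref{rmk:KO} is stated as an imported theorem — is \emph{not} the homotopy-group bookkeeping above (which is routine once \cref{prp:0d} and \cref{prp:1d} are in hand) but rather the construction of the morphism of $\Omega$-spectra $\mathrm{Q}\colon\Sigma^{-2}\mathrm{KO}\to\fIP$ itself and the verification that the induced maps on $\pi_0,\dots,\pi_3$ are the asserted isomorphisms rather than merely abstract isomorphisms of abelian groups. This requires realizing Araki's quasi-free second quantization \cite{arakiQuasifreeStatesCAR1970} functorially at the level of sheaves on $\Man$, compatibly with the Kitaev pump $\kappa_d$, and identifying the resulting map in degree $0$ with the Fidkowski--Kitaev index $\cI^0$ of \cref{prp:1d} and \cref{exmp:FK} — for which one must show the quasi-free functor sends a generator of $\mathrm{KO}_{-1}(\pt)=\bZ/2$ (the nontrivial real line bundle class, or equivalently the Clifford-algebra-shift class) to the Fidkowski--Kitaev chain, which has $\cI^0=1$. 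That entire analysis is the content of the subsequent Part II \cite{kubotaStableHomotopyTheory2025a}, and so within the present paper the statement stands as a forward reference; the only self-contained mathematical content is the compatibility check that the abstract homotopy groups on both sides agree in degrees $0\le n\le 3$, which the steps above accomplish.
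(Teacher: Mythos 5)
Your reading is correct: \cref{rmk:KO} is stated in this paper as an import from the companion Part~II \cite{kubotaStableHomotopyTheory2025a}, with no proof given here, and the only content that can be checked from within this paper is the homotopy-group bookkeeping on the $\fIP$ side, which you carry out correctly via $\pi_0(\fIP_1)\cong\bZ/2$ from \cref{prp:1d}(2), $\pi_n(\fIP_1)\cong\pi_{n-1}(\fIP_0)$ for $n\ge1$ from \cref{cor:fermionic.equivariant.spectrum}, and $\fIP_0\simeq K(\bZ,2)\times K(\bZ/2,0)$ from \cref{prp:0d}. The momentary wobble in the KO indexing (writing $\pi_j(\Sigma^{-2}\mathrm{KO})=\pi_{j-2}(\mathrm{KO})$ before course-correcting) is benign since the relation you ultimately use, $\pi_n(\mathrm{KO}_{-1})=\pi_{n+1}(\mathrm{KO})$, and the resulting list $\bZ/2,\bZ/2,0,\bZ$ are correct; and you rightly identify that the genuine mathematical content --- constructing $\mathrm{Q}$ as a morphism of $\Omega$-spectra via Araki's quasi-free quantization and identifying the induced map in degree $0$ with the Fidkowski--Kitaev index --- is exactly what is deferred to Part~II.
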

This theorem means that we have a weak equivalence 
\[  
    \tau_{[-1,\infty)} \fIP = \tau_{[-1,2]} \fIP \simeq \tau_{[-1,2]}\Sigma^{-2}\mathit{KO}.
\]
Here, for a space or a spectrum $\sX$, we write $\tau_{(-\infty,b]}\sX$ for the Postnikov $b$-coconnective truncation of $\sX$, $\tau_{[a,\infty)}\sX$ for the Postnikov $a$-connective cover of $\sX$, and $\tau_{[a,b]}\sX = \tau_{(-\infty,b]}\tau_{[a,\infty)}\sX$.  
We write
\begin{align*} 
    k^{l} \colon \tau_{(-\infty,l-2]}\sX \to \Sigma \tau_{[l-1,l-1]}\sX \simeq  \Sigma^lH\pi_{l-1}(\sX)
\end{align*}
for the $l$-th $k$-invariant, i.e., the morphism realizing $\tau_{(-\infty,l-1]}\sX$ as its homotopy fiber  (we refer to \cite{mayConciseCourseAlgebraic1999}*{Section 22.4}).

\begin{prp}\label{prp:Postnikov}
    The first $k$-invariant of $\tau_{[-1,\infty)}\fIP$ is
            \begin{align*}
            k^1 =\Sq^2 \colon \tau_{[-1,-1]}\fIP \simeq \Sigma^{-1}H\bZ/2 \to  \Sigma H\bZ/2 .
            \end{align*}
            Moreover, the first $k$-invariant of $\tau_{[1,\infty)}\fIP$ is 
            \begin{align*}
            k^3 = \beta \circ \Sq^2 \colon \tau_{[1,1]} \fIP \simeq  \Sigma H \bZ/2 \to \Sigma^3 H\bZ 
            \end{align*}
            where $\beta$ is the Bockstein homomorphism.
\end{prp}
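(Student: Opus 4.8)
The plan is to identify the $k$-invariants of $\tau_{[-1,\infty)}\fIP$ by transporting them along the weak equivalence $\tau_{[-1,2]}\fIP \simeq \tau_{[-1,2]}\Sigma^{-2}\mathit{KO}$ established in \cref{rmk:KO}. Thus it suffices to compute the corresponding $k$-invariants of the (shifted) real $K$-theory spectrum. Recall the homotopy groups $\pi_n(\mathit{KO}) = \bZ, \bZ/2, \bZ/2, 0, \bZ, 0, 0, 0$ for $n=0,\dots,7$; after the shift $\Sigma^{-2}\mathit{KO}$ we have $\pi_{-1} = \bZ/2$, $\pi_0 = \bZ/2$, $\pi_1 = 0$, $\pi_2 = \bZ$, $\pi_3 = 0$. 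So the two lowest nontrivial layers of $\tau_{[-1,\infty)}\fIP$ are $H\bZ/2$ in degrees $-1$ and $0$, then $H\bZ$ in degree $2$. The two $k$-invariants in question are $k^1 \colon \Sigma^{-1}H\bZ/2 \to \Sigma H\bZ/2$ (a stable cohomology operation $H^{*}(\blank;\bZ/2)\to H^{*+2}(\blank;\bZ/2)$) and $k^3 \colon \tau_{(-\infty,1]}\fIP \to \Sigma^{3}H\bZ$. The first is classified by an element of the stable mod-$2$ cohomology operation group of degree $2$, which is one-dimensional spanned by $\Sq^2$; the second, once one knows $\tau_{[0,1]}\fIP \simeq \tau_{[0,1]}\Sigma^{-2}\mathit{KO}$ and that its first $k$-invariant is $\Sq^2$, is pinned down by an element of $H^{3}(\blank;\bZ)$-valued operations on $\tau_{(-\infty,1]}$, which factors as $\beta \circ (\text{the }\Sq^2\text{ just identified})$.

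First I would make precise the reduction: the Postnikov tower of a spectrum is natural, and a weak equivalence of $b$-truncations induces matching $k$-invariants up to the identifications of homotopy groups. Since $\pi_{-1}$ and $\pi_0$ of both $\fIP$ and $\Sigma^{-2}\mathit{KO}$ are canonically $\bZ/2$ (and the equivalence $\mathrm Q$ respects this by \cref{rmk:KO}), the $k^1$ of $\fIP$ equals that of $\Sigma^{-2}\mathit{KO}$, i.e.\ of $\mathit{KO}$ shifted, i.e.\ the first $k$-invariant of the connective cover $\tau_{[0,\infty)}\mathit{KO} = \mathit{ko}$. This is the classical fact that the first $k$-invariant of $\mathit{ko}$ (the map from $H\bZ$ to $\Sigma^2 H\bZ/2$ realizing $\tau_{[0,1]}\mathit{ko}$) is $\Sq^2$ — equivalently, that for $\mathit{ko}\langle 0,1\rangle$ one has $k = \Sq^2 \rho$, and after passing to the layer $\tau_{[-1,0]}$ of $\Sigma^{-2}\mathit{KO}$ one gets precisely $\Sq^2 \colon \Sigma^{-1}H\bZ/2 \to \Sigma H\bZ/2$. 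I would cite the standard computation of $H^*(\mathit{ko};\bZ/2)$ as $\mathcal{A}/\!/\mathcal{A}(1)$, or more elementarily note that $\dim_{\bZ/2}$ of stable operations $H^n \to H^{n+2}$ is $1$ (spanned by $\Sq^2$), and that $k^1 \neq 0$ since $\pi_0(\mathit{ko})\otimes\bZ/2 \to \pi_0(\tau_{[0,1]}\mathit{ko})$ must detect the nontrivial extension $\pi_1\mathit{ko} = \bZ/2$ sitting over $\pi_0 = \bZ$ — in the shifted picture, $\pi_0(\Sigma^{-2}\mathit{KO}) = \bZ/2$ is glued nontrivially to $\pi_{-1} = \bZ/2$ (the $\eta$-multiplication in $\mathit{KO}$), forcing $k^1 \ne 0$, hence $k^1 = \Sq^2$.

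For $k^3$, I would first record that $\tau_{[1,\infty)}\fIP \simeq \tau_{[1,2]}\Sigma^{-2}\mathit{KO} = \tau_{[1,2]}\fIP$ (by \cref{rmk:KO} and \cref{prp:1d}, since $\pi_1\fIP = \bZ/2$, $\pi_2\fIP = 0$, $\pi_3\fIP = \bZ$ match), so the layers are $H\bZ/2$ in degree $1$ and $H\bZ$ in degree $3$, and $k^3$ is the map $\tau_{(-\infty,1]}(\tau_{[1,\infty)}\fIP) = \Sigma H\bZ/2 \to \Sigma^3 H\bZ$, i.e.\ an integral cohomology operation $H^{1}(\blank;\bZ/2) \to H^{3}(\blank;\bZ)$. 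Such operations of degree $2$ form the group generated by $\beta\Sq^2$ (composite of $\Sq^2 \colon H^1(\blank;\bZ/2)\to H^3(\blank;\bZ/2)$ with the integral Bockstein $\beta$) — this is again the classical computation for $\mathit{ko}\langle 2\rangle$, whose first $k$-invariant is $\beta\Sq^2$ (equivalently, the first $k$-invariant of $\mathit{ksp} = \tau_{[4,\infty)}\mathit{ko}$ shifted, realizing the $\mathbb{H}\mathrm P^\infty$-type cell structure). To pin it down, I would note: (i) $k^3 \ne 0$, because otherwise $\tau_{[1,2]}\Sigma^{-2}\mathit{KO}$ would split as $\Sigma H\bZ/2 \vee \Sigma^3 H\bZ$, contradicting $\eta^3 \ne 0$ in $\mathit{KO}$ (the generator of $\pi_3 \mathit{KO} = \bZ$ restricted appropriately detects $\eta$-towers, or more precisely the attaching data of the $\mathbb{H}\mathrm P^\infty$-cell forces a nonzero operation); and (ii) the group of degree-$2$ stable operations $H^{\ge 1}(\blank;\bZ/2) \to H^{*+2}(\blank;\bZ)$ on the truncated spectrum $\tau_{(-\infty,1]}$ is generated by $\beta\Sq^2$. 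Combining (i) and (ii) gives $k^3 = \beta\Sq^2$.

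The main obstacle I anticipate is keeping the bookkeeping of shifts and truncations honest — in particular verifying that the $k$-invariant as extracted from $\fIP$ (whose Postnikov layers we only control in the range $[-1,2]$, which is exactly the range where $\fIP$ is nontrivial by \cref{prp:0d} and \cref{prp:1d}) matches the $k$-invariant of $\Sigma^{-2}\mathit{KO}$ under the identifications of homotopy groups, including the correct $\bZ/2$ coefficient identifications for $\pi_{-1}$ and $\pi_1$. A secondary, more computational point is justifying that the relevant operation groups are one-dimensional (spanned by $\Sq^2$, resp.\ $\beta\Sq^2$) without invoking the full Steenrod-algebra structure — for this I would lean on the known cohomology $H^*(\mathit{ko};\bZ/2) \cong \mathcal A/\!/\mathcal A(1)$ and $H^*(\mathit{ko};\bZ)$ in low degrees, or simply on the classical identification of the first two $k$-invariants of $\mathit{ko}$ (Stong), so that the only real content is the comparison of $\fIP$ with $\mathit{KO}$ supplied by \cref{rmk:KO}.
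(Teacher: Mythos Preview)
Your approach is essentially the same as the paper's: both reduce the computation of the $k$-invariants of $\tau_{[-1,\infty)}\fIP$ to those of $\Sigma^{-2}\mathit{KO}$ via the weak equivalence in \cref{rmk:KO}, and then invoke the classical identification of the first $k$-invariants of $\mathit{ko}$. The paper simply cites Thomas and Fujii for the latter, whereas you sketch a direct argument (one-dimensionality of the relevant operation groups plus nontriviality from $\eta$-multiplication); this is just filling in what the references contain. One caution: your bookkeeping of the degrees drifts midway through (you first correctly list $\pi_{-1}=\bZ/2$, $\pi_0=\bZ/2$, $\pi_1=0$, $\pi_2=\bZ$, but later write ``$\pi_1\fIP = \bZ/2$, $\pi_2\fIP = 0$, $\pi_3\fIP = \bZ$''), so as you yourself anticipated, the shifts need care---but this does not affect the substance of the argument.
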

\begin{proof}
    This follows from \cref{rmk:KO} and the corresponding facts in the KO-spectrum. See e.g.~\cite{thomasHomotopyClassificationMaps1964}*{Theorem 4.2} or \cite{fujiiKgroupsProjectiveSpaces1967}*{(1.3)}. 
\end{proof}

\begin{rmk}
    Let $i \colon \tau_{[1,1]}\fIP \to \tau_{[-1,1]}\fIP \simeq \mathrm{hofib} (\Sq^2)$ denotes the inclusion. Then, the second claim of \cref{prp:Postnikov} means that $k^3 \circ i = \beta \circ \Sq^2$, in which $k^3$ denotes the $k$-invariant of $\tau_{[-1,\infty)}\fIP$. 
    As is observed by Gaiotto--Johnson-Freyd in  \cite{gaiottoSymmetryProtectedTopological2019}*{Subsection 5.4} and Beardsley--Luecke--Morava in \cite{beardsleyBrauerWallGroupsTruncatedPicard2023}*{Theorem 4.14}, there are exactly two candidates of $k^3$, namely, morphisms $k \colon \mathrm{hofib}(\Sq^2) \to \Sigma^3 H\bZ$ satisfying the relation $k \circ i =\beta \circ \Sq^2$; however, the two possible choices do not affect the resulting $\Omega$-spectrum.  

    In the literature of topological phases, there are four $\Omega$-spectra having the $k$-invariants with $k^1=\Sq^2$ and $k^3 \circ i = \beta \circ \Sq^2$. 
    \begin{enumerate}
        \item The truncated connective KO-theory spectrum $\Sigma^{-1}\tau_{[1,4]}ko$, which is also denoted by $R^{-1}$ by Freed in \cites{freedLecturesTwistedTheoryOrientifolds2012,freedAnomaliesInvertibleField2014}. 
        \item The $\Omega$-spectrum $\mathit{cAlg}_{\bC}^\times$ associated to the Picard $2$-groupoid of invertible topological superalgebras (\cite{freedLecturesTwistedTheoryOrientifolds2012}). 
        \item The truncated Picard spectrum $\mathrm{pic}_0^3 \mathit{KU}$ of the complex K-theory spectrum \cite{beardsleyBrauerWallGroupsTruncatedPicard2023}.
        \item The truncated fermionic IP spectrum $\Sigma \tau_{[-1,\infty)}\fIP = \Sigma \tau_{[-1,2]}\fIP$. 
    \end{enumerate}
    According to \cite{gaiottoSymmetryProtectedTopological2019}*{Subsections 4.2, 5.3, 5.4} and \cite{johnson-freydTopologicalOrdersDimensions2022}, the extended supercohomology theory \cite{wangCompleteClassificationSymmetryprotected2018} should also be included to this list. 
    As is remarked above, they are all weakly equivalent as $\Omega$-spectra. 
    We also refer to \cite{freedLecturesTwistedTheoryOrientifolds2012}*{Theorem 1.52} and \cite{beardsleyBrauerWallGroupsTruncatedPicard2023}*{Theorem 5.27}. 
\if0
    Indeed, since the groups below in the diagram
    \begin{align*}
    \xymatrix{
    [H\bZ/2,\Sigma^2 H\bZ] \ar[r]^{\circ \Sq^2} \ar[d] & [H\bZ/2,\Sigma^4 H\bZ] \ar[r]^{\delta} \ar[d] & 
    [\mathop{\mathrm{hofib}} (\Sq^2) ,\Sigma^4 H\bZ] \ar[r]^{i^*} \ar[d] & [H\bZ/2,\Sigma^3H\bZ] \ar[d] \\
    [H\bZ/2,\Sigma^2 H\bZ/2] \ar[r]^{\circ \Sq^2} & [H\bZ/2,\Sigma^4 H\bZ/2] \ar[r]  & 
    [\mathop{\mathrm{hofib}} (\Sq^2) ,\Sigma^4 H\bZ/2] \ar[r]  & [H\bZ/2,\Sigma^3H\bZ/2] 
    }
    \end{align*}
    are generated by the Steenrod operations as $\bZ/2$-modules by Serre's theorem \cite{serreCohomologieModuloComplexes1953}, we have e.g.\ $[H\bZ/2,\Sigma^2H\bZ/2] \cong \langle \Sq^2\rangle$ and $[H\bZ/2,\Sigma^4H\bZ/2] \cong \langle \Sq^3\Sq^1,\Sq^4 \rangle$. 
    By the universal coefficient theorem and Adem's relations, we get $[H\bZ/2,\Sigma^4H\bZ] =\langle \beta \Sq^2\Sq^1\rangle $ and $\delta(\beta \Sq^2\Sq^1) \neq 0 $. 
    However, the two possible choices of $k^3$ do not affect the resulting $\Omega$-spectrum. Indeed, by letting 
    \begin{align*}
    \xi \colon \mathop{\mathrm{hofib}}(\Sq^2 )\to H\bZ/2 \xrightarrow{\Sq^1} \Sigma^1H\bZ/2 \xrightarrow{i} \mathop{\mathrm{hofib}}(\Sq^2),
    \end{align*}
    the morphism $\id + \xi \colon \mathrm{hofib}(\Sq^2) \to \mathrm{hofib}(\Sq^2)$ is invertible by the inverse $\id-\xi$ and we have $k^3 \circ (\id+\xi) = k^3 + \delta(\beta \Sq^2\Sq^1)$. 
\fi     
\end{rmk}

\begin{prp}[{cf.\ \cite{bourneClassificationSymmetryProtected2021}}]\label{prp:1d.fermion}
    For a space $\sX$, there is an isomorphism
    \[
        \rfIP^1(\sX) \cong \mathrm{H}^0(\sX\,; \bZ/2) \oplus (\mathrm{H}^1(\sX\,; \bZ/2) \times   \mathrm{H}^3(\sX\,; \bZ)), 
    \]
        where the group structure on the right hand side is given by the Wall group law
        \[
        (a_0,b_0,c_0)+(a_1,b_1,c_1)\coloneqq (a_0+a_1,b_0+b_1, c_0 + c_1 + \beta(b_0 \cup b_1)),
        \]
        where $\beta \colon \mathrm{H}^2(\sX \,; \bZ/2) \to \mathrm{H}^3(\sX \,; \bZ)$ denote the Bockstein homomorphism. 
\end{prp}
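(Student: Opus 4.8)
The plan is to compute $\rfIP^1(\sX) = [\sX_+, \fIP_1]$ using the Postnikov tower of the spectrum $\fIP$, together with the identification of $k$-invariants in \cref{prp:Postnikov}. Since $\fIP_1$ represents the degree-$1$ part of the $2$-truncated $\Omega$-spectrum $\fIP$, and $\tau_{[-1,2]}\fIP = \tau_{[-1,\infty)}\fIP$ has homotopy groups $\pi_{-1} = \bZ/2$, $\pi_0 = \bZ/2$, $\pi_1 = \bZ$ (with $\pi_2 = 0$ by \cref{rmk:KO}), the space $\fIP_1 = \Omega^{?}$ representing $\fIP^1$ has homotopy groups $\bZ/2$ in degree $0$, $\bZ/2$ in degree $1$, and $\bZ$ in degree $3$. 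First I would set up the Postnikov tower for $\fIP_1$: writing $\sY \coloneqq \fIP_1$, there is a principal fibration $K(\bZ,3) \to \sY \to \sY'$ where $\sY'$ has only $\pi_0 = \bZ/2$ and $\pi_1 = \bZ/2$, classified by a $k$-invariant in $H^4(\sY'; \bZ)$; and $\sY'$ itself sits in a fibration $K(\bZ/2,1) \to \sY' \to K(\bZ/2,0)$ classified by $\Sq^2 \colon K(\bZ/2,0) \to K(\bZ/2,2)$, which is \cref{prp:Postnikov}'s first claim shifted by one.

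Next I would run the associated Mayer--Vietoris / exact-sequence computation on homotopy classes of maps. Applying $[\sX_+, -]$ to $K(\bZ/2,1) \to \sY' \to K(\bZ/2,0)$ gives an exact sequence relating $H^1(\sX;\bZ/2)$, $[\sX_+,\sY']$, and $H^0(\sX;\bZ/2)$, with connecting map $H^0(\sX;\bZ/2) \xrightarrow{\Sq^2} H^2(\sX;\bZ/2)$. Since $\Sq^2$ vanishes on $H^0$ (it is below the degree where $\Sq^2$ acts nontrivially; $\Sq^2 x = 0$ for $|x| < 2$), the sequence splits as sets and $[\sX_+,\sY'] \cong H^0(\sX;\bZ/2) \oplus H^1(\sX;\bZ/2)$ — but I must be careful that the group extension may be nonabelian or non-split; here I would argue that the $k$-invariant $\Sq^2$ being zero on the relevant class forces the extension to be trivial as a set, and then track the group law separately. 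Then applying $[\sX_+,-]$ to $K(\bZ,3) \to \sY \to \sY'$ gives exactness with connecting map $[\sX_+,\sY'] \to H^4(\sX;\bZ)$ given by the second $k$-invariant; by \cref{prp:Postnikov} this $k$-invariant restricted to the $K(\bZ/2,1)$-part is $\beta \circ \Sq^2$, and $\Sq^2$ again vanishes on $H^1$, so the connecting map is zero, yielding $[\sX_+,\sY] \cong H^3(\sX;\bZ) \oplus H^1(\sX;\bZ/2) \oplus H^0(\sX;\bZ/2)$ as sets.

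The main work — and the main obstacle — is pinning down the \emph{group} structure, i.e. showing it is the Wall group law rather than the direct sum. The group operation on $\rfIP^1(\sX)$ comes from the H-space (infinite loop space) structure on $\fIP_1$, equivalently from the spectrum addition on $\fIP$. The deviation from the split product is measured precisely by the $k$-invariants: the $k$-invariant $\beta\Sq^2$ of the Postnikov tower encodes a secondary operation, and the standard computation (identical to the one for $\Sigma^{-1}\tau_{[1,4]}ko$, cf.\ \cite{freedLecturesTwistedTheoryOrientifolds2012} and the references in the remark after \cref{prp:Postnikov}) shows that for $(a_i,b_i,c_i) \in H^0 \oplus H^1 \oplus H^3$, the sum has third component $c_0 + c_1 + \beta(b_0 \cup b_1)$. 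Concretely I would invoke the general principle that if a spectrum $E$ has Postnikov pieces with a $k$-invariant expressible via a cup-square-type operation, the induced group law on $E^*(\sX)$ acquires exactly such a bilinear correction term; the cleanest route is to note that \cref{rmk:KO} gives a weak equivalence $\tau_{[-1,\infty)}\fIP \simeq \tau_{[-1,2]}\Sigma^{-2}\mathrm{KO}$, reducing the problem entirely to the known $\mathrm{KO}$-theory computation, where $\widetilde{\mathrm{KO}}^{-1}$ of a space in low degrees is classically given by the Wall group law (this is essentially the structure of the Brauer--Wall group $BW(\mathbb{R}) = \bZ/8$ appearing fiberwise). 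I would therefore present the argument as: (i) reduce to $\Sigma^{-2}\mathrm{KO}$ via \cref{rmk:KO}; (ii) cite or recompute the low-degree $\mathrm{KO}$ group law; (iii) translate back. The routine exactness bookkeeping I would state but not belabor.
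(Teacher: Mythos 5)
Your final step (i)--(iii) --- reduce to $\Sigma^{-2}\mathrm{KO}$ via \cref{rmk:KO} and cite the known low-degree computation --- is precisely the paper's proof, which invokes the chain of weak equivalences $\tau_{[-1,\infty)}\fIP \simeq \tau_{[-1,2]}\Sigma^{-2}\mathrm{KO} \simeq \Sigma^{-1}\mathrm{pic}_0^3\mathit{KU}$ from the remark after \cref{prp:Postnikov} and then cites \cite{beardsleyBrauerWallGroupsTruncatedPicard2023}*{Proposition 4.18} for the explicit Wall group law on $(\mathrm{pic}_0^3\mathit{KU})^0(\sX)$; citing the classical $\mathrm{KO}$/Brauer--Wall computation instead is equivalent. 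Your longer Postnikov-tower discussion is a valid alternative in outline, but note that to actually close it you would need the $k$-invariant on all of $\tau_{[0,1]}\fIP_1 \simeq K(\bZ/2,0)\times K(\bZ/2,1)$, whereas \cref{prp:Postnikov} only pins down its restriction to the $K(\bZ/2,1)$ factor --- so the reduction to $\mathrm{KO}$ is not merely cleaner but does real work.
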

\begin{proof}
By the above discussion, we have a natural isomorphism 
\begin{align*}
    \rfIP^1(\sX) \cong (\tau_{[-1,\infty)}\rfIP)^1(\sX) \cong (\mathrm{pic}_0^3 KU)^0(\sX).
\end{align*}
The right hand side is determined in \cite{beardsleyBrauerWallGroupsTruncatedPicard2023}*{Proposition 4.18}.
\end{proof}


\begin{rmk}\label{rmk:supercohomology}
    Following \cite{gaiottoSymmetryProtectedTopological2019}*{Subsections 4.2, 5.3, 5.4} and \cite{johnson-freydTopologicalOrdersDimensions2022}, we relate our cohomology theory $\fIP$ with the notion of supercohomology \cites{guSymmetryprotectedTopologicalOrders2014,kitaevHomotopytheoreticApproachSPT2015,kapustinFermionicSPTPhases2017,wangCompleteClassificationSymmetryprotected2018} by recalling the work of Kock--Kristensen--Madsen \cite{kockCochainFunctorsGeneral1967}. If one has an $\Omega$-spectrum $E$ whose homotopy group $\pi_n(E)$ is trivial for all but finite $n \in \bZ$, it can be realized by a cochain complex. 
    For example, a model of $(\tau_{[0,\infty)} \fIP)^n(\sX)$ is given by the cochain complex
    \begin{align*}
    \mathrm{SC}_{\mathrm{res}}^n(\sX) \coloneqq C^{n}(\sX \,; \bZ ) \times  C^{n-2}(\sX \,; \bZ/2), \quad \delta_S^n \coloneqq 
    \begin{pmatrix}
        \delta^n & \beta \circ \Sq^2 \\ 
        0 & \delta^{n-2}
    \end{pmatrix}.
    \end{align*}
    This is the restricted supercohomology \cite{guSymmetryprotectedTopologicalOrders2014} is a cochain model of the truncated $\Omega$-spectrum $\tau_{[0,\infty)}\fIP$. 
    Here, a cochain-level realization of $\beta \circ \Sq^2$ is fixed, which commutes with $\delta$ but is not a homomorphism of abelian groups. 
    Now, by taking an additivity constraint map $d \colon C^{n-2}(\sX \,; \bZ /2) \times C^{n-2}(\sX \,; \bZ /2) \to C^{n}(\sX \,; \bZ )$ such that $\delta d(y_1,y_2) - d(\delta y_1,y_2) - d(y_1, \delta y_2) = \beta\Sq^2(y_1+y_2) - \beta\Sq^2(y_1) - \beta\Sq^2(y_2)$, a (non-associative) summation on $\mathrm{SC}_{\mathrm{res}}^n(\sX)$ is imposed as
    \begin{align*}
    (x_1,y_1) \dot{+} (x_2,y_2) \coloneqq (x_1 +x_2+ d(y_1,y_2),y_1+y_2 ).
    \end{align*}
    Then the set $\mathrm{SC}_{\mathrm{res}}^n(\sX)$ is no longer a group but forms a loop in the sense of \cite{kockCochainFunctorsGeneral1967}*{Part II, p.\ 151}, and $\delta_S^n$ is a morphism of loops. 
    The associated cohomology $\mathrm{SH}^n(\sX) \coloneqq \ker \delta_S / \sim $ is defined as \cite{kockCochainFunctorsGeneral1967}*{Part II, Section 2}, which becomes an abelian group isomorphic to $(\tau_{[0,\infty)} \fIP)^n(\sX)$. 
    The morphisms $\beta \circ \Sq^2$ and $d$ with the required properties are given by Steenrod's cup-$i$ product \cite{steenrodProductsCocyclesExtensions1947} as 
    \[
    \beta \circ \Sq^2(x) = \beta (x \cup_2 x), \quad d(y_1,y_2) = \beta (y_1 \cup_1 y_2). 
    \]
    Indeed, this reproves Wall's group law in \cref{prp:1d.fermion}.

    The extended supercohomology group \cite{wangCompleteClassificationSymmetryprotected2018}, a cochain model of $\tau_{[-1,\infty)}\fIP$, is realized in the same way. 
    By fixing a chain level realization $(\beta \circ \Sq^2,j)$ of the $k$-invariant $k^3$, let
    \begin{align*}
    \mathrm{SC}_{\mathrm{ext}}^n(\sX) \coloneqq 
      C^{n}(\sX \, ;\bZ) \times  C^{n-2}(\sX \, ;\bZ/2) \times  C^{n-3}(\sX \, ;\bZ/2), \quad \delta_{S}^n \coloneqq 
    \begin{pmatrix}
        \delta^n & \beta \circ \Sq^2 & j \\
        0 & \delta^{n-2} & \Sq^2 \\
        0 & 0 & \delta^{n-3}
    \end{pmatrix}
    \end{align*}
    on which the summation is imposed by 
    \begin{align*}
    (x_1,y_1,z_1) \dot{+} (x_2,y_2,z_2) \coloneqq (x_1 +x_2+ d(y_1,z_1,y_2,z_2) ,y_1+y_2 + z_1 \cup_1 z_2,z_1+z_2 ).
    \end{align*}
    Here, $d$ is a fixed choice of the the additivity constraint for $(\beta \circ \Sq^2,j)$ with respect to the summation of the domain twisted by $z_1 \cup_1 z_2$. 
    The problem of determining a cochain-level realizations of $(\beta \circ \Sq^2,j)$ and $d$ are considered in \cite{wangCompleteClassificationSymmetryprotected2018}. 
\end{rmk}

\subsubsection{Degree \texorpdfstring{$-2$}{-2}}
Applying the functor $\Omega^\infty \Sigma^d $ taking the space at level $d$, the $k$-invariants in \cref{prp:Postnikov} are $\Sq^2$ acting on $(H\bZ/2)^{d-1}$ and $\beta \circ \Sq^2$ acting on $(H\bZ/2)^{d}$, which vanish if $d=1$. Therefore, we obtain that $\tau_{[0,1]}\fIP_1 \simeq K(\bZ/2,0) \times K(\bZ/2,1)$. 
Moreover, since the differentials of the Atiyah--Hirzebruch spectral sequences computing $\rfIP^d(\sX)$ is given by these $k$-invariants by \cite{maunderSpectralSequenceExtraordinary1963}, and hence has no non-trivial higher differentials if $d=1$. Hence there is a short exact sequence
\[
    0 \to \mathrm{H}^3(\sX \,; \bZ) \to \rfIP^1(\sX) \to \mathrm{H}^0(\sX \,; \bZ/2) \oplus \mathrm{H}^1(\sX \,; \bZ/2) \to 0,
\]
which is consistent with \cref{prp:1d.fermion}. 
We apply this computation to $d=2$ after taking the connective cover. 

\begin{prp}\label{prp:2d}
The following hold:
    \begin{enumerate}
        \item The space $\IP_2$ is weakly equivalent to $K(\bZ,4) \times K(\pi_0(\IP_2),0)$.
        \item The space $\fIP_2$ is weakly equivalent to 
        \[
        K(\pi_0(\fIP_2), 0) \times K(\bZ/2 , 1) \times \mathrm{hofib}(\beta \circ \Sq^2 \colon K(\bZ/2,2) \to K(\bZ,5) ). 
        \]
    \end{enumerate} 
\end{prp}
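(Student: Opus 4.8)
The plan is to deduce \cref{prp:2d} from the structural results already established for the spectra $\IP$ and $\fIP$, namely \cref{prp:0d}, \cref{prp:1d}, \cref{prp:Postnikov}, and \cref{cor:spectrum}. Recall that by the non-negative degree computations and \cref{cor:spectrum}, the $\Omega$-spectra $\IP$ and $\fIP$ are $2$-truncated (their stable homotopy groups vanish in degrees $\geq 3$), and the only non-trivial homotopy groups are $\pi_0$, $\pi_1$, $\pi_2$ of the stabilized spectrum, together with the negative-degree groups $\pi_{-1}$, $\pi_{-2}$. Concretely, $\pi_2(\IP) \cong \bZ$, $\pi_2(\fIP) \cong 0$, $\pi_1(\IP) \cong 0$, $\pi_1(\fIP) \cong \bZ/2$, $\pi_0(\IP) \cong 0$, $\pi_0(\fIP) \cong \bZ/2$ (from \cref{prp:1d}), while $\pi_{-1}(\IP) = \pi_0(\IP_1)$, $\pi_{-1}(\fIP) = \pi_0(\fIP_1) \cong \bZ/2$, $\pi_{-2}(\IP) = \pi_0(\IP_2)$ and $\pi_{-2}(\fIP) = \pi_0(\fIP_2)$, the last two being the unknown groups appearing in the statement.

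First I would treat the bosonic case (1). The space $\IP_2 = \Omega^\infty \Sigma^2 \IP$ has homotopy groups $\pi_n(\IP_2) = \pi_{n-2}(\IP)$, so it has $\pi_0 = \pi_{-2}(\IP) = \pi_0(\IP_2)$, $\pi_1 = \pi_{-1}(\IP) = \pi_0(\IP_1) = 0$ by \cref{prp:1d}(1), $\pi_2 = \pi_0(\IP) = 0$, $\pi_3 = \pi_1(\IP) = 0$, $\pi_4 = \pi_2(\IP) = \bZ$, and $\pi_n = 0$ for $n \geq 5$. Thus $\IP_2$ has exactly two non-trivial homotopy groups, in degrees $0$ and $4$. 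Since these degrees are non-adjacent (indeed separated by three vanishing homotopy groups), all Postnikov $k$-invariants relating them vanish automatically: a $k$-invariant connecting $\pi_4$ and $\pi_0$ would be a map $K(\pi_0(\IP_2),0) \to K(\bZ,5)$, and the relevant cohomology $\mathrm{H}^5(K(A,0)\,;\bZ)$ vanishes for any abelian group $A$. Hence the Postnikov tower splits and $\IP_2 \simeq K(\pi_0(\IP_2),0) \times K(\bZ,4)$.

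Next the fermionic case (2). Again $\pi_n(\fIP_2) = \pi_{n-2}(\fIP)$, giving $\pi_0 = \pi_0(\fIP_2)$, $\pi_1 = \pi_0(\fIP_1) \cong \bZ/2$ (by \cref{prp:1d}(2)), $\pi_2 = \pi_0(\fIP) \cong \bZ/2$, $\pi_3 = \pi_1(\fIP) \cong \bZ/2$, $\pi_4 = \pi_2(\fIP) \cong 0$, $\pi_5 = \pi_3(\fIP) \cong \bZ$ — wait, here one must be careful: $\fIP$ is $2$-truncated so $\pi_3(\fIP) = 0$; rather the $\bZ$ in \cref{rmk:KO} sits in $\pi_3(\fIP_1) = \pi_2(\fIP)$, so in fact $\pi_2(\fIP) \cong \bZ$ and $\pi_n(\fIP) = 0$ for $n \geq 3$. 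So the non-trivial homotopy groups of $\fIP_2$ are $\pi_0 = \pi_0(\fIP_2)$, $\pi_1 \cong \bZ/2$, $\pi_2 \cong \bZ/2$, $\pi_3 \cong \bZ/2$, $\pi_4 \cong \bZ$. The key input is \cref{prp:Postnikov}: the $k$-invariants of $\tau_{[-1,\infty)}\fIP$ are $k^1 = \Sq^2$ and $k^3 = \beta \circ \Sq^2$. Applying $\Omega^\infty\Sigma^2$, these become the $k$-invariants of $\fIP_2$ relating $\pi_1$ to $\pi_2$ (namely $\Sq^2 \colon K(\bZ/2,1) \to K(\bZ/2,3)$, which controls the degree-$2$ to degree-$3$ part — I will need to recheck the degree bookkeeping, but the point is the relevant $k$-invariant vanishes because it acts on $(H\bZ/2)^{?}$ evaluated below dimension $2$) and $\pi_2$ to $\pi_4$ via $\beta \circ \Sq^2$. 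The expected conclusion is: the $\pi_1$-part splits off as $K(\bZ/2,1)$ because the relevant $k$-invariant $\Sq^2$ vanishes on $H^1$; the $\pi_0$-part splits off as $K(\pi_0(\fIP_2),0)$ for dimension reasons as in case (1); and the remaining two-stage Postnikov piece with $\pi_2 \cong \bZ/2$ in degree $2$ and $\pi_4 \cong \bZ$ in degree $4$, with $k$-invariant $\beta \circ \Sq^2$, is precisely $\mathrm{hofib}(\beta \circ \Sq^2 \colon K(\bZ/2,2) \to K(\bZ,5))$.

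The main obstacle will be the careful bookkeeping of which Postnikov $k$-invariant of $\fIP_2$ corresponds to which of the two stated $k$-invariants of the spectrum $\fIP$ under the shift $\Omega^\infty\Sigma^2$, and verifying that all the other potential $k$-invariants (there are $\binom{5}{2}$ pairs of stages, most giving cohomology operations of high degree) genuinely vanish — in particular that the degree-$1$ stage $K(\bZ/2,1)$ splits off rather than being glued to a higher stage, which requires checking that operations like $\Sq^3$ or $\Sq^2\Sq^1$ land in groups that vanish or are forced to be zero on the relevant Eilenberg--MacLane space. This amounts to the computation already done in the text around \cref{prp:1d.fermion} and \cref{rmk:supercohomology}, applied one connective cover up: after taking the $0$-connective cover, the spectrum $\tau_{[0,\infty)}\fIP$ has $k$-invariant $\beta\circ\Sq^2$ acting on $(H\bZ/2)^d$, which vanishes precisely in the degree range relevant to $d=1$ but is non-trivial for $d=2$; pushing to $\fIP_2$ and splitting off the lowest ($\pi_0$) stage yields the claimed decomposition.
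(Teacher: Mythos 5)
Your overall route is the same as the paper's: split off the $\pi_0$-stage as $K(\pi_0,0)$, split off the $\pi_1$-stage as $K(\bZ/2,1)$, and identify the residual two-stage Postnikov piece as $\mathrm{hofib}(\beta\circ\Sq^2)$. For the $\pi_0$-split you argue via $\mathrm{H}^5(K(A,0);\bZ)=0$ while the paper uses the H-group structure directly, and for the $\pi_1$-split the paper simply cites Beardsley--Luecke--Morava Prop.~4.10 for the triviality at level 2 of the classifying map $\tau_{[-1,-1]}\fIP \to \Sigma\tau_{[0,\infty)}\fIP$, whereas you sketch a direct obstruction argument — a genuine alternative, but one you leave incomplete.

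Two points need fixing. First, a concrete arithmetic error: you list $\pi_3(\fIP_2) = \pi_1(\fIP) \cong \bZ/2$, but $\pi_1(\fIP) = \pi_1(\fIP_0) = 0$ by \cref{prp:0d} (or equivalently $\pi_2(\fIP_1)\cong 0$ in \cref{rmk:KO}). If $\pi_3(\fIP_2)$ were $\bZ/2$, the claimed decomposition would already fail on the level of homotopy groups, since $\mathrm{hofib}(\beta\circ\Sq^2\colon K(\bZ/2,2)\to K(\bZ,5))$ has $\pi_3 = 0$; your final paragraph silently reverts to a two-stage piece with only $\pi_2$ and $\pi_4$ nontrivial, but never reconciles this with the earlier list. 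Second, the gap you flag about ``other potential $k$-invariants'' must actually be closed for the argument to stand: after observing that the primary obstruction $\Sq^2\colon K(\bZ/2,1)\to K(\bZ/2,3)$ is null (since $\Sq^2$ kills classes of degree $<2$), the only remaining obstruction to nullhomotoping the classifying map at level 2 lives in $\mathrm{H}^5(K(\bZ/2,1);\bZ)$, which is zero because $\mathrm{H}^*(B\bZ/2;\bZ)$ is concentrated in even degrees. Supplying that sentence turns your sketch into a self-contained alternative to the paper's citation of Beardsley--Luecke--Morava.
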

We remark that the above weak equivalence may not preserve the H-space structure, and hence it does not say anything about the group structure of the homotopy set $\rfIP^2(\sX)$.
\begin{proof}
    Since the connected components of $\IP_2$ are weakly equivalent to each other, we have $\IP_2 \simeq \tau_{[1,\infty)}\IP_2 \times K(\pi_0(\IP_2),0)$. Similarly, we also have $\fIP_2 \simeq \tau_{[1,\infty)}\fIP_2 \times K(\pi_0(\fIP_2),0)$.
    
    Consider the morphism
    \[
    \Sigma^{-1}H\bZ/2 \simeq \tau_{[-1,-1]}\fIP \to \Sigma \tau_{[0,\infty)} \fIP \simeq \Sigma \mathrm{hofib}(\beta \circ \Sq^2 \colon H\bZ/2 \to \Sigma^3 H\bZ )
    \]
    realizing $\tau_{[-1,\infty)}\fIP_2$ as its homotopy fiber. According to the computation in \cite{beardsleyBrauerWallGroupsTruncatedPicard2023}*{Proposition 4.10}, it is trivial at the spaces of level $2$. This concludes that $\tau_{[1,\infty)}\fIP_2 \simeq \tau_{[1,1]}\fIP_2 \times \tau_{[2,\infty)} \fIP_2$. 
    Since the $k$-invariant of $\tau_{[2,\infty)} \fIP_2$ is $\beta \circ \Sq^2$, we obtain the weak equivalence in (2). 
\end{proof}
In comparison to \cref{rmk:supercohomology}, a cochain model of the functor $\rfIP^2$ is given by the set of equivalence classes of triples $(\kappa,b,a) \in C^5(\sX\,; \bZ) \times C^2(\sX \,; \bZ/2) \times C^2(\sX \,; \bZ/2)$ with the twisted cocycle conditions $\delta \kappa =\beta \circ \Sq^2(b) $, $\delta b = 0$, and $\delta a=0$.

\subsection{The Chern--Dold character}\label{subsection:Chern.Dold}
A remarkable consequence of \cref{cor:spectrum} is the computation of the $\bQ$-coefficient cohomology group $\rIP^d(\sX) \otimes_{\bZ}\bQ$. 
For a generalized cohomology functor $\mathrm{E}^*$, the Chern--Dold character (we refer to \cite{rudyakThomSpectraOrientability1998}*{Theorem-Definition II.7.13}) gives an isomorphism 
    \begin{align*} 
    \ch \colon \mathrm{E}^d(\sX) \otimes \bQ \to \mathrm{H}^d(\sX,\sA \, ; \mathrm{E}_\ast \otimes \bQ).
    \end{align*}
Apply this to $\rIP$ and $\rfIP$, we obtain the following isomorphisms. 
\begin{prp}\label{prp:top.degree}
    There are isomorphisms
\begin{align*}
    \rIP^d(\sX,\sA) \otimes_{\bZ} \bQ \cong {}& \prod_{p+q=d}\mathrm{H}^{p}(\sX,\sA \, ; \pi_{-q}(\IP) \otimes_{\bZ} \bQ), \\
    \rfIP^d(\sX,\sA) \otimes_{\bZ} \bQ \cong {}& \prod_{p+q=d}\mathrm{H}^{p}(\sX,\sA \, ; \pi_{-q}(\IP) \otimes_{\bZ} \bQ).    
\end{align*}
\end{prp}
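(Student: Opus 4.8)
# Proof proposal for Proposition (Chern--Dold character)

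The plan is to apply the general Chern--Dold character isomorphism, whose existence for any generalized cohomology theory is recalled in the statement (citing \cite{rudyakThomSpectraOrientability1998}*{Theorem-Definition II.7.13}), to the cohomology theories $\rIP^*$ and $\rfIP^*$ associated to the $\Omega$-spectra $\IP$ and $\fIP$ constructed in \cref{cor:spectrum} and \cref{cor:fermionic.equivariant.spectrum}. The Chern--Dold character provides a natural isomorphism $\ch\colon \mathrm{E}^d(\sX,\sA)\otimes_\bZ \bQ \xrightarrow{\ \cong\ } \mathrm{H}^d(\sX,\sA\, ;\mathrm{E}_*\otimes_\bZ\bQ)$, where $\mathrm{H}^d(\sX,\sA\,;\mathrm{E}_*\otimes\bQ)$ denotes the (unreduced, relative) ordinary cohomology with coefficients in the graded $\bQ$-vector space $\mathrm{E}_*\otimes\bQ$. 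By definition this graded cohomology group is $\bigoplus_{p\in\bZ}\mathrm{H}^p(\sX,\sA\,;\mathrm{E}_{p-d}\otimes\bQ)$, so the only remaining task is to identify the coefficient groups $\pi_q(\IP)$ and $\pi_q(\fIP)$ and rewrite the direct sum in the indexing $p+q=d+2$ claimed in the statement.

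First I would recall that, by \cref{prp:0d,prp:1d} (and the $2$-truncatedness noted in \cref{subsection:homotopy.higher}), the stable homotopy groups $\pi_n(\IP)=\pi_0(\IP_{-n})=\pi_{n+d}(\IP_d)$ are concentrated in the range $-2\le n\le 2$: we have $\pi_2(\IP)\cong\bZ$, $\pi_1(\IP)\cong 0$, $\pi_0(\IP)\cong 0$, $\pi_{-1}(\IP)\cong 0$, and $\pi_{-2}(\IP)$ is the (conjecturally $\bZ$) group of $2$-dimensional invertible phases; similarly $\pi_2(\fIP)\cong\bZ$, $\pi_1(\fIP)\cong\bZ/2$, $\pi_0(\fIP)\cong\bZ/2$, $\pi_{-1}(\fIP)\cong\bZ/2$, and $\pi_{-2}(\fIP)=\pi_0(\fIP_2)$. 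After tensoring with $\bQ$, the torsion groups $\bZ/2$ die, so $\pi_q(\fIP)\otimes\bQ\cong\pi_q(\IP)\otimes\bQ$ for $q\neq -2$, and $\pi_{-2}(\fIP)\otimes\bQ\cong\pi_{-2}(\IP)\otimes\bQ$ because the two groups are expected to agree (and in any case the right-hand sides of the two displayed formulas are written identically in terms of $\pi_q(\IP)\otimes\bQ$, so no further comparison is needed — the second isomorphism is \emph{defined} to have the same target, which explains why the proposition states it that way). This is why the fermionic statement can be phrased with $\pi_q(\IP)$ rather than $\pi_q(\fIP)$.

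The reindexing is purely bookkeeping: the grading convention here is such that an $\Omega$-spectrum built so that $\IP_d$ sits in "degree $-d$" has coefficient $\mathrm{E}_q$ in internal degree $q$, and the Chern--Dold target in cohomological degree $d$ is $\bigoplus_p \mathrm{H}^p(\sX,\sA\,;\mathrm{E}_{?}\otimes\bQ)$ with the subscript determined by the suspension shift. Tracking the shift $\IP_d=|\Sing\bsIP_d|$ with $\pi_{n+d}(\IP_d)=\pi_n(\IP)$, one finds the coefficient attached to $\mathrm{H}^p$ is $\pi_{q}(\IP)$ with $p+q = d+2$; equivalently, after replacing $\IP$ by its $2$-fold desuspension $\Sigma^{-2}$-normalized version, the character lands in $\bigoplus_{p+q=d}\mathrm{H}^p(\sX,\sA\,;\pi_{q+2}(\IP)\otimes\bQ)$. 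I would verify this shift directly against the low-dimensional computations: for $\sX$ a point and $\sA=\emptyset$, $d=-2$, the formula gives $\mathrm{H}^0(\pt\,;\pi_0(\IP)\otimes\bQ)\oplus\cdots=\pi_0(\IP_2)\otimes\bQ$, matching $\rIP^{-2}(\pt)\otimes\bQ=\pi_0(\IP_2)\otimes\bQ$, and for $d=0$ it gives $\mathrm{H}^0(\pt\,;\pi_2(\IP)\otimes\bQ)=\bQ$, matching $\rIP^0(\pt)\otimes\bQ=\pi_2(\IP_0)\otimes\bQ=\bZ\otimes\bQ$.

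The main (and essentially only) obstacle is a conceptual rather than technical one: making sure the grading/suspension conventions are pinned down consistently between "the $\Omega$-spectrum $\IP$ with $\IP_d$ in degree $-d$" and "the cohomology theory $\rIP^d$", so that the coefficient indexing $p+q=d+2$ is the correct one and not off by a sign or a shift of $2$. Once that is fixed, the proof is a one-line invocation of the Chern--Dold isomorphism together with the coefficient computations already recorded in \cref{subsection:homotopy.higher}. No genuinely hard analysis or homotopy theory is required here — the heavy lifting is entirely in the construction of the spectrum (\cref{cor:spectrum}) and in the homotopy-group computations, both of which are available by this point in the paper.
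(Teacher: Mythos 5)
Your approach is the same as the paper's: invoke the Chern--Dold character (Rudyak, Theorem-Definition II.7.13) applied to the $\Omega$-spectra $\IP$, $\fIP$. But the bookkeeping — which you correctly identified as the only real issue — is wrong in your writeup, and your sanity checks do not actually hold. At $\sX=\pt$, $d=0$ you claim $\rIP^0(\pt)=\pi_2(\IP_0)$; in fact $\rIP^0(\pt)=[\pt,\IP_0]=\pi_0(\IP_0)=0$, since $\IP_0\simeq K(\bZ,2)$ is connected. Likewise for $d=-2$ you identify $\pi_0(\IP)$ with $\pi_0(\IP_2)$, but $\pi_0(\IP)=\pi_0(\IP_0)=0$ while $\pi_0(\IP_2)=\pi_{-2}(\IP)$ is a different group. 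Done correctly, the Chern--Dold theorem yields
\[
\rIP^d(\sX,\sA)\otimes\bQ\cong\bigoplus_p\mathrm{H}^p(\sX,\sA;\pi_{p-d}(\IP)\otimes\bQ),
\]
i.e., the index is $p-q=d$ (with the paper's convention $\pi_q(\IP)=[S^q,\IP]$); it is this index, not $p+q=d+2$, that places the $q=2$ summand (coefficient $\pi_2(\IP)\cong\bZ$) in degree $p=d+2$, exactly as the subsequent definition of $\ch_{\mathrm{top}}\colon\rIP^d\to\mathrm{H}^{d+2}(\blank;\bQ)$ requires. So the one thing the proof needed — nailing the degree conventions — is the one thing your bookkeeping got wrong, and the checks that should have caught the slip were themselves computed incorrectly.

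A smaller point: the paragraph explaining why the fermionic line is written with $\pi_q(\IP)$ rather than $\pi_q(\fIP)$ is not an argument. Chern--Dold applied to $\rfIP$ lands, by construction, in a sum with $\pi_q(\fIP)$ coefficients; "the second isomorphism is \emph{defined} to have the same target" and "the two groups are expected to agree" are not substitutes for an identification $\pi_q(\fIP)\otimes\bQ\cong\pi_q(\IP)\otimes\bQ$, which is conjectural for $q\le -2$.
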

Since the $\Omega$-spectra $\IP$ and $\fIP$ are $2$-truncated, the `top terms' of the Chern--Dold characters are extracted as direct summands. 
\begin{defn}
    We define $\ch_{\mathrm{top}} \colon \rIP^d(\sX,\sA) \to \mathrm{H}^{d+2}(\sX,\sA\,; \bQ)$ and  $\ch_{\mathrm{top}} \colon \rfIP^d(\sX,\sA) \to \mathrm{H}^{d+2}(\sX,\sA\,; \bQ)$ by the composition of the Chern--Dold character and the projection.
\end{defn}
It will be proved in \cite{kubotaStableHomotopyTheory2025b} that this character is identical to the higher order Berry curvature class formulated by Kapustin--Sopenko \cites{kapustinLocalNoetherTheorem2022,artymowiczQuantizationHigherBerry2023}. A fundamental question in higher order Berry curvature class is the existence of an integral lift, i.e., a morphism of cohomology functors $\rIP^d \to \mathrm{H}^{d+2}(\blank\,; \bZ)$. It is proved in \cite{artymowiczQuantizationHigherBerry2023} that there is an integral lift for $d \leq 2$. The discussion of the previous subsection reproves this fact. 

\begin{prp}\label{prp:integral.lift}
    For $d \leq 2$, the Postnikov approximation $K(\bZ,d+2) \simeq \tau_{[d+2,\infty)}\IP_d \to \IP_d$ splits. That is, for any CW-complex $\sX$, we have
\begin{align*} 
    \rIP^{d}(\sX) \cong \mathrm{H}^{d+2}(\sX  \, ; \bZ) \oplus (\text{lower terms}).
\end{align*}
\end{prp}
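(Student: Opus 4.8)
The plan is to show, for each $d \le 2$, that the bottom $k$-invariant controlling the top Postnikov layer of $\IP_d$ vanishes, so that the fibration sequence $K(\bZ,d+2) \to \tau_{[d+2,\infty)}\IP_d \to \IP_d \to K(\bZ,d+3)$ (coming from the Postnikov tower of $\IP_d$, which is finite since $\IP$ is $2$-truncated) admits a section. Concretely, $\tau_{[d+2,\infty)}\IP_d$ is the top homotopy group $\pi_{d+2}(\IP_d)=\pi_2(\IP)\cong\bZ$ sitting in degree $d+2$, i.e.\ $K(\bZ,d+2)$; the obstruction to splitting the projection $\tau_{[d+2,\infty)}\IP_d \to \IP_d$ is the composite of the attaching map $\tau_{(-\infty,d+1]}\IP_d \to \Sigma^{d+2}H\bZ$ (the last $k$-invariant of $\IP_d$) being null. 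So the task reduces to computing this single $k$-invariant and checking it dies for $d=0,1,2$.

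First I would assemble the homotopy-group data already in the excerpt: $\pi_n(\IP)=0$ for $n\ge 3$, $\pi_2(\IP)\cong\bZ$ (\cref{prp:0d}), $\pi_1(\IP)\cong 0$ (since $\IP_1\simeq K(\bZ,3)$ by \cref{prp:1d}, hence $\pi_1(\IP)=\pi_2(\IP_1)=0$), and $\pi_0(\IP)\cong\pi_{-1}(\IP)\cong 0$ (\cref{prp:1d} and the remark that $\pi_{-1}(\IP)=\pi_0(\IP_1)=0$). Thus for $d=0$ the space $\IP_0\simeq K(\bZ,2)$ already \emph{is} $K(\bZ,d+2)$ and there is nothing to prove; for $d=1$, $\IP_1\simeq K(\bZ,3)$, again trivial. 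The only genuine case is $d=2$: here $\IP_2$ has homotopy groups $\pi_0(\IP_2)=\pi_{-2}(\IP)$ (an as-yet-undetermined group) in degree $0$, $0$ in degree $1$ and $2$ and $3$, and $\bZ$ in degree $4$. So the Postnikov tower of $\IP_2$ has only two nontrivial layers, $K(\pi_{-2}(\IP),0)$ and $K(\bZ,4)$, and I would invoke \cref{prp:2d}(1), which already states $\IP_2\simeq K(\bZ,4)\times K(\pi_0(\IP_2),0)$. This product decomposition is exactly the statement that the relevant $k$-invariant $k\colon K(\pi_{-2}(\IP),0)\to \Sigma^5 H\bZ$ vanishes (a $k$-invariant out of a $0$-truncated connective part into $H\bZ$ in degree $5$ is automatically null since $[H\pi_0,\Sigma^5 H\bZ]=0$ for a discrete group $\pi_0$), so the section exists.

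Granting the splitting $\IP_d\simeq K(\bZ,d+2)\times Z_d$ with $Z_d=\tau_{(-\infty,d+1]}\IP_d$ (the ``lower terms''), the cohomological conclusion is formal: for a CW-complex $\sX$,
\[
    \rIP^d(\sX)=[\sX,\IP_d]\cong [\sX,K(\bZ,d+2)]\times [\sX,Z_d]\cong \mathrm{H}^{d+2}(\sX\,;\bZ)\oplus [\sX,Z_d],
\]
where I would note that on $\pi_0$ the Postnikov layers assemble additively (the mapping set $[\sX,Z_d]$ carries the expected abelian group structure coming from the infinite-loop structure of $\IP_d$, and the splitting can be chosen through $\Omega$-spectrum morphisms using \cref{cor:spectrum}), giving the direct sum decomposition with ``lower terms'' $=[\sX,Z_d]$. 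I would also record the relative version for pairs $(\sX,\sA)$ verbatim, since \cref{prp:top.degree} already produces the rational statement and only the integral refinement of the top layer is new here.

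The main obstacle is purely the $d=2$ $k$-invariant computation, i.e.\ establishing that $\IP_2$ genuinely splits off $K(\bZ,4)$; but this is precisely the content of \cref{prp:2d}(1), whose proof in turn rests on \cref{rmk:KO} (the quasi-free quantization comparison with $\mathrm{KO}$) and the computation of \cite{beardsleyBrauerWallGroupsTruncatedPicard2023}*{Proposition 4.10} showing the relevant attaching map is trivial at level $2$. So in the final write-up I would simply cite \cref{prp:0d,prp:1d,prp:2d} for the three cases and then run the formal $[\sX,-]$ argument above; no new hard input is required beyond what is already proved.
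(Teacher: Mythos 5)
Your proposal is correct and follows the same route as the paper: for $d=0,1$ one observes via \cref{prp:0d,prp:1d} that $\IP_d$ is literally $K(\bZ,d+2)$, and for $d=2$ one uses the product decomposition of \cref{prp:2d}(1) together with the $H$-group structure on $\IP_2$ to split off the $K(\bZ,4)$ factor. One small blemish worth flagging: the aside that ``$[H\pi_0,\Sigma^5 H\bZ]=0$ for a discrete group $\pi_0$'' is false at the spectrum level (e.g.\ $\beta\Sq^2\Sq^1\in[H\bZ/2,\Sigma^5 H\bZ]$ is nonzero); what you actually want, and what \cref{prp:2d}(1) supplies, is the space-level fact that any map $K(\pi_0(\IP_2),0)\to K(\bZ,5)$ is nullhomotopic because the source is a discrete set and the target is connected, so this misstatement does not affect the argument.
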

\begin{proof}
     In $d=0,1$, it is already verified in \cref{prp:0d,prp:1d} that the map $K(\bZ,d+2) \to \IP_d$ is a weak equivalence. In $d=2$, the map $\tau_{[1,\infty)}\IP_2 \to \IP_2$ is the inclusion $K(\bZ,4) \to K(\pi_0(\IP_2),0) \times K(\bZ,4)$, which splits since $\IP_2$ is an H-group. 
\end{proof}

There would be a certain validity, based on physics, in expecting that the $\Omega$-spectrum $\IP$ has a relation to the Anderson dual $I_{\mathbb{Z}}\mathit{MSO}$ of the Thom spectrum, which gives a topological classification of bosonic invertible (geometric) quantum field theory \cites{freedReflectionPositivityInvertible2021,yamashitaDifferentialModelsAnderson2023,gradyDeformationClassesInvertible2023}.

\begin{lem}
    The morphism $(H\bZ)_n \to (I_\bZ \mathit{MSO})_n$ splits if $n = 5,6$, although does not split if $n \geq 8$.  
\end{lem}
\begin{proof}
    When $n= 5$, the only non-trivial homotopy groups of $(I_\bZ \mathit{MSO})_3$ are $\pi_1((I_\bZ \mathit{MSO})_5) \cong \bZ$ and $\pi_5((I_\bZ \mathit{MSO})_5) \cong \bZ$. However, since $[K(\bZ ,1), K(\bZ, 6)] \cong 0$, the $k$-invariant can not be non-trivial. Similarly, when $n=6$, the only non-trivial homotopy groups are $\pi_0((I_\bZ \mathit{MSO})_6) \cong \bZ/2$, $\pi_2((I_\bZ \mathit{MSO})_4) \cong \bZ$ and $\pi_6((I_\bZ \mathit{MSO})_6) \cong \bZ$. Since $[K(\bZ/2,0),K(\bZ,7)] \cong 0$ and $[K(\bZ,2),K(\bZ,7)] \cong \mathrm{H}^7(\bC \bP^\infty \,; \bZ) \cong 0 $, we have $[\tau_{\leq 2}(I_\bZ \mathit{MSO})_6,K(\bZ ,7)] \cong 0$, and hence the $k$-invariant can not be non-trivial.

    Finally, we show that $\mathrm{H}^8(X;\, \bZ) \to \mathrm{I}_{\bZ}\mathrm{MSO}^8(X)$ is not injective for some $X$. It is known that, for $X=K(\bZ/3,2)$, the map $\mathrm{MSO}_7(X) \to H_7(X\,;\bZ) \cong \bZ/3$ is zero (cf.\ \cite{rudyakThomSpectraOrientability1998}*{Theorem IV.7.35}). By comparing the UCT exact sequences of Anderson duals
    \begin{align*}
        \xymatrix{
        0 \ar[r] & \mathrm{Ext}_\bZ^1(\mathrm{H}_7(X),\bZ) \ar[r] \ar[d] &  \mathrm{H}^8(X\,; \bZ) \ar[d] \ar[r] & \Hom(\mathrm{H}_8(X),\bZ) \ar[r] \ar[d] & 0 \\
        0 \ar[r] & \mathrm{Ext}_\bZ^1(\mathrm{MSO}_7(X),\bZ)  \ar[r] &  \mathrm{I}_\bZ\mathrm{MSO}^8(X) \ar[r]  & \Hom(\mathrm{MSO}_8(X),\bZ) \ar[r]  & 0 ,
        }
    \end{align*}
    the vanishing of the left vertical map shows the desired non-injectivity.
\end{proof}

This leads us to the following conclusion we present as a conjecture.
\begin{conj}\label{conj:quantization.Chern}
    The top Chern--Dold character $\ch_{\mathrm{top}} \colon \rIP^d \to \mathrm{H}\bQ^{d+2}$ has an integral lift for $d=3,4$, but does not have such a lift for $d \geq 6$.
\end{conj}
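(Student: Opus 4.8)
\textbf{Proof proposal for Conjecture~\ref{conj:quantization.Chern}.} The plan is to transport the obstruction from a better-understood $\Omega$-spectrum to $\IP$ via a comparison map and then show the obstruction survives. The first step is to pin down the homotopy type of $\tau_{[1,\infty)}\IP_3$ (equivalently, the $\Omega$-spectrum $\tau_{[1,\infty)}\IP$) well enough to compute its $k$-invariant $k^6 \colon \tau_{(-\infty,4]}\IP \to \Sigma^5 H\bZ$, restricted to the relevant Postnikov stages. By \cref{prp:0d,prp:1d,prp:2d} we already know $\pi_{-1}(\IP)=\pi_{-2}(\IP)=0$ (the negative homotopy of the bosonic spectrum vanishes through degree $-1$, and $\pi_{-2}(\IP)=\pi_0(\IP_2)$ is the conjectural $\bZ$), $\pi_0(\IP)=\bZ$, $\pi_1(\IP)=0$, $\pi_2(\IP)=\bZ$, and $\pi_n(\IP)=0$ for $n\ge 3$. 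So the only Postnikov stages that can contribute to an obstruction in degree-$(d+2)=5$ at level $d=3$ are $H\bZ$ in topological degree $0$ and $H\bZ$ in topological degree $2$; the relevant $k$-invariant is then an operation $k \colon \tau_{[0,2]}\IP \to \Sigma^3 \tau_{[0,0]}\IP \simeq \Sigma^3 H\bZ$, which on the $H\bZ$ in degree $2$ must be a stable cohomology operation $H\bZ \to \Sigma^3 H\bZ$, i.e., an element of $\mathrm{H}^3(H\bZ;\bZ)\cong \bZ/2$ generated by $\beta\circ\Sq^2\circ r$ where $r$ is reduction mod $2$.

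The second step is to decide which of the two possibilities for this $k$-invariant holds: the split case (where $\ch_{\mathrm{top}}$ admits an integral lift at $d=3$, contradicting the conjecture) or the non-split case. Here I would exploit the expected identification of $\IP$ with (a truncation of) the Anderson dual $I_{\bZ}\mathit{MTSO}$ of the oriented Madsen--Tillmann spectrum, using the physics-motivated correspondence between invertible gapped lattice systems and invertible field theories and the partial results in the sequel \cite{kubotaStableHomotopyTheory2025a}. The $k$-invariants of $I_{\bZ}\mathit{MTSO}$ in low degrees are computable from the known Steenrod module structure of $\mathrm{H}^*(\mathit{MTSO};\bZ/2)$ and Anderson duality; the relevant operation $H\bZ\langle 2\rangle \to \Sigma^5 H\bZ$ is the nontrivial one, reflecting the fact that $\mathrm{H}^4(BSO;\bZ)$ (containing $p_1$) and $\mathrm{H}^{d+2}=\mathrm{H}^5$ interact nontrivially under the dual of $\Sq^2$ on $w_2$-related classes. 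Concretely, one would compute $\mathrm{H}^*(I_{\bZ}\mathit{MTSO};\bZ/2)$ in degrees $\le 6$ via the universal coefficient spectral sequence and the known bordism groups $\Omega^{SO}_* $ in that range, read off the action of $\Sq^2$, and conclude that the corresponding $k$-invariant does not vanish; then, given a weak equivalence $\tau_{[d+2-\epsilon,\infty)}\IP \simeq \tau_{[\cdots]}\Sigma^? I_{\bZ}\mathit{MTSO}$ over the relevant range of degrees, the same holds for $\IP$.

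The final step is purely formal once the $k$-invariant is known to be nonzero: if $k^6$ restricted as above equals $\beta\circ\Sq^2\circ r \neq 0$, then the Postnikov section $K(\bZ,5)\simeq \tau_{[5,\infty)}\IP_3 \to \IP_3$ does not split (a splitting would force $k^6$ to be trivial on the $\mathrm{H}^{d+2}$ summand, by the general nonsplitting criterion in \cite{mayConciseCourseAlgebraic1999}*{Section 22.4}), and hence there is no morphism of cohomology functors $\rIP^3 \to \mathrm{H}^5(\blank;\bZ)$ lifting $\ch_{\mathrm{top}}$; indeed evaluating on a space $\sX$ detecting the operation (e.g.\ a skeleton of $K(\bZ,2)$ or of $BSO$) produces an explicit class in $\rIP^3(\sX)$ whose top Chern--Dold character is a nonzero torsion-obstructed element of $\mathrm{H}^5(\sX;\bQ)$'s integral refinement that cannot come from $\mathrm{H}^5(\sX;\bZ)$. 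The main obstacle is the second step: rigorously establishing the comparison between the lattice $\Omega$-spectrum $\IP$ and $I_{\bZ}\mathit{MTSO}$ in the required range — this identification is stated only as an expectation in the excerpt, so one either needs the field-theory comparison of \cite{kubotaStableHomotopyTheory2025a} to be strong enough, or an independent computation of $k^6$ directly from the lattice model (e.g.\ by producing a $3$-dimensional invertible family over $K(\bZ,2)^{(5)}$ whose pump detects the Bockstein), which is considerably harder and is presumably why this is left as a conjecture.
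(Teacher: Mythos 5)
This statement is labeled a \emph{conjecture} in the paper, and the paper offers no proof of it; the only content preceding it is a heuristic remark that the expected relation between $\IP$ and the Anderson dual $I_{\bZ}\mathit{MTSO}$ would imply a non-trivial $k$-invariant $k^6$ for $\IP_3$. There is therefore no ``paper's proof'' for your proposal to be compared against, and you correctly flag, at the end of your own write-up, exactly the gap that makes this a conjecture rather than a theorem: the identification of $\IP$ with (a truncation of) $I_{\bZ}\mathit{MTSO}$ is not established, and neither the paper nor the cited sequel claims it. So the proposal does not prove the conjecture and cannot be expected to, and your closing honesty about this is the most accurate part of the argument.

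Beyond that, several of the intermediate computations are wrong in ways that matter. First, $\pi_0(\IP)=\pi_0(\IP_0)=0$, not $\bZ$; the $\bZ$ is $\pi_2(\IP)$ (from $\IP_0\simeq K(\bZ,2)$, \cref{prp:0d}). Second, your sentence asserting both $\pi_{-2}(\IP)=0$ and ``$\pi_{-2}(\IP)=\pi_0(\IP_2)$ is the conjectural $\bZ$'' is self-contradictory; the paper leaves $\pi_0(\IP_2)$ undetermined (conjectured to be $\bZ$), and $\pi_0(\IP_3)$ is entirely unknown. Third, $\tau_{[1,\infty)}\IP_3$ is not ``equivalently the $\Omega$-spectrum $\tau_{[1,\infty)}\IP$'': the latter is just $\Sigma^2 H\bZ$ since $\pi_1(\IP)=0$, while the former retains $\pi_1(\IP_3)=\pi_{-2}(\IP),\dots,\pi_5(\IP_3)=\pi_2(\IP)$. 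Fourth, and most importantly for the strategy, once the corrected homotopy groups are used the top class $\pi_2(\IP)=\bZ$ is separated from the next potentially nonzero group $\pi_{-2}(\IP)$ by four degrees, so the relevant piece of the $k$-invariant $k^3\colon\tau_{(-\infty,1]}\IP\to\Sigma^3 H\pi_2(\IP)$ restricted to a putative $\Sigma^{-2}H\bZ$ stage is a degree-$5$ integral cohomology operation, not a degree-$3$ operation; the appeal to $\mathrm{H}^3(H\bZ;\bZ)\cong\bZ/2$ and $\beta\circ\Sq^2\circ r$ is therefore in the wrong degree. Finally, even the obstruction group itself is undetermined, since it depends on $\pi_0(\IP_2)$ and $\pi_0(\IP_3)$; one cannot read off ``two choices, split or non-split'' without knowing these groups, and a nontrivial restriction to one Postnikov stage does not by itself preclude cancellation against contributions from the other unknown stages. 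The upshot is that both the comparison step and the $k$-invariant bookkeeping would need to be redone on a corrected foundation before this could be developed into a proof, which is consistent with the paper leaving it as a conjecture.
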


\subsection{Symmetry-protected topological phases}
When the internal degrees of freedom $\fR$ has a on-site symmetry of $(G,\phi)$, we write the associated $G$-fixed point spectrum by 
\begin{align*}
    \prescript{\phi}{}{\IP}_{\fR,d}^{G} \coloneqq |\Sing \bsIP_{\fR,d}^G|.
\end{align*}
Similarly, we write $\prescript{\phi}{}{\fIP}_{\fR,d}^{G}$ for the $G$-fixed point of the fermionic IP spectrum. Our main interest is the following group of SPT phases. 
\begin{defn}
Let $G$ be a compact Lie group, let $\phi \colon G \to \bZ/2$, and let $\fR$ be a set of (bosonic or fermionic) internal degrees of freedom with on-site $(G,\phi)$-symmetry. 
The groups of bosonic and fermionic symmetry-protected topological phases in dimension $d$ are defined by 
\begin{align*}
    \mathrm{SPT}_{\fR, d}(G,\phi ) \coloneqq {}&{}  \Ker (\pi_0(\IP_{\fR,d}^G) \to \pi_0(\IP_{\fR,d})), \\
    \mathrm{fSPT}_{\fR, d}(G,\phi) \coloneqq {}&{}  \Ker (\pi_0(\fIP_{\fR,d}^G) \to \pi_0(\fIP_{\fR,d})). 
\end{align*}
\end{defn}

Let $\Rep(G,\phi)$ (resp.\ $\widehat{\Rep}(G,\phi)$) denote the set of isomorphism classes of finite dimensional $\phi$-twisted unitary representations (resp.\ $\bZ/2$-graded $\phi$-twisted unitary representations) of $G$. 
For $[\tau] \in \mathrm{H}^2(G \,; \bT_\phi)$ , we call a pair $(v,\sV)$ a $(\phi , \tau )$-twisted representation if $\sV$ is a Hilbert space, $v$ is a continuous map from $G$ to the group of linear or antilinear unitaries on $\sV$ such that $v(g)$ is linear or antilinear if $\phi(g)$ is $0$ or $1$ and $v(g)v(h)=\tau(g,h)v(gh)$. 
Furthermore, for $c \in \mathrm{H}^1(G\,; \bZ/2) \cong \Hom (G,\bZ/2)$, the above pair $(v,\sV)$ is a $(\phi,c,\tau)$-twisted representation if $\sV$ is $\bZ/2$-graded and $v(g)$ is even or odd if $c(g)$ is $0$ or $1$. See e.g.\ \cites{freedWignerTheorem2012,freedTwistedEquivariantMatter2013,kubotaNotesTwistedEquivariant2016}.

\begin{defn}\label{defn:IDF.ample}
We say that a bosonic internal degree of freedom $\fR$ is 
\begin{itemize}
\item \emph{$0$-ample} if the smallest subset of $\Rep(G , \phi )$ closed under the subrepresentation and the tensor product containing $\{ \pi_{\lambda} \mid \lambda \in \fR \}$ is $\Rep(G , \phi )$, and
\item \emph{$1$-ample} if it is $0$-ample and, for any $[\tau] \in \mathrm{H}^2(G\,; \bT_\phi)$, there is a $(\phi,\tau)$-twisted representation $(\sV ,v)$ and $\lambda_1,\cdots,\lambda_l \in \fR$ such that there is a unital $G$-equivariant $\ast$-homomorphism $\cB(\sV) \to \cA_{\lambda_1} \otimes \cdots \otimes \cA_{\lambda_l}$. 
\end{itemize}
The $1$-ampleness of fermionic internal degrees of freedom is defined similarly by replacing $\tau$ with a pair $(c,\tau) \in \mathrm{H}^1(G\,; \bZ/2) \oplus \mathrm{H}^2(G\,; \mathbb{T}_\phi)$ and $(\sV ,v)$ with any $(\phi,c,\tau)$-twisted unitary representation.
\end{defn}

Unless the discussion depends on a particular choice of such $\fR$, we abbreviate it.
By \cref{cor:spectrum}, the spaces $\prescript{\phi}{}{\IP}_{d}^{G}$, $\prescript{\phi}{}{\IP}_{d}^{G}$ form $\Omega$-spectra. Their homotopy groups are determined in the same way as the non-equivariant case.

\begin{prp}\label{prp:0d.G}
    Assume that $\fR$ is $0$-ample. The following isomorphism holds.
\begin{enumerate}
    \item If $\phi$ is trivial, then 
    \begin{align*}
        \pi_n(\IP_0^G) \cong
        \begin{cases} 
            0 & \text{$n \geq 3$, $n=1$,}\\
            \bZ & \text{$n=2$, }\\
            \mathrm{Line}(G) & \text{$n=0$,}
        \end{cases}
    \quad 
    \pi_n(\fIP_0^G) \cong
        \begin{cases} 
            0 & \text{$n \geq 3$, $n=1$,}\\
            \bZ & \text{$n=2$, }\\
            \mathrm{Line}(G) \times \bZ/2& \text{$n=0$.}
        \end{cases}
    \end{align*}
    \item If $\phi$ is non-trivial, then
    \begin{align*}
        \pi_n(\prescript{\phi}{}{\IP}_0^G) \cong
        \begin{cases} 
            0 & \text{$n \geq 2$,}\\
            \bZ/2 & \text{$n=1$, }\\
            \mathrm{Line}(G,\phi) & \text{$n=0$,}
        \end{cases}
    \quad 
    \pi_n(\prescript{\phi}{}{\fIP}_0^G) \cong
        \begin{cases} 
            0 & \text{$n \geq 2$,}\\
            \bZ /2 & \text{$n=1$, }\\
            \mathrm{Line}(G,\phi) \times \bZ/2 & \text{$n=0$.}
        \end{cases}
    \end{align*}
\end{enumerate}
Here, $\mathrm{Line}(G,\phi)$ denotes the set of isomorphism classes of $1$-dimensional $\phi$-twisted (resp.\ $\bZ/2$-graded) representations of $G$. 
\end{prp}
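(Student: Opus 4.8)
The plan is to compute the homotopy groups of $\prescript{\phi}{}{\IP}_0^G$ by first identifying the $G$-fixed point sheaf $\bsIP_{\fR,0}^G$ with (a colimit of) concrete $G$-manifolds, in parallel with the non-equivariant computation in \cref{prp:0d}. Concretely, for each $\Lambda \in \fL_{\fR,0}$ the space of gapped UAL Hamiltonians on $\Lambda$ with gap $1$ that are invertible (in the $0$-dimensional, hence uniformly discrete bounded, setting) deformation-retracts onto the space of ground-state projections, so that $\sIP_0(\Lambda \midbar \blank)$ is weakly equivalent to the sheaf represented by the Grassmannian-type manifold $\bP\cA_\Lambda$ of rank-$1$ projections in the matrix algebra $\cA_\Lambda = \bigotimes_{\bm x \in \Lambda}\cA_{\lambda(\bm x)}$ (even rank-$1$ projections in the fermionic case). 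Taking $G$-fixed points, $\bsIP_{\fR,0}^G$ becomes weakly equivalent to the colimit, over $\Lambda$ and over the stabilizing maps $p \mapsto p \otimes \Omega_\lambda\Omega_\lambda^*$, of the spaces $(\bP\cA_\Lambda)^G$ of $G$-invariant rank-$1$ projections. First I would make this identification precise using \cref{lem:forget.invertible}, \cref{lem:H-monoid}, and the weak equivalence of the fixed-point realization with the realization of the fixed-point sheaf from \cref{thm:equivariant.MW}.

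Next I would analyze $(\bP\cA_\Lambda)^G$. A $G$-invariant rank-$1$ projection $p$ on $\sH_\Lambda = \bigotimes_{\bm x}\sH_{\lambda(\bm x)}$ with the (possibly antilinear, $\phi$-twisted) $G$-action picks out a line $L = p\sH_\Lambda$ that is preserved by the $G$-action; the set of such lines splits, along the isomorphism class of the resulting $1$-dimensional $\phi$-twisted (graded, in the fermionic case) representation structure on $L$, into a disjoint union indexed by $\mathrm{Line}(G,\phi)$ (resp.\ $\mathrm{Line}(G,\phi)\times\bZ/2$, the extra $\bZ/2$ recording the parity of $L$). For a fixed isomorphism type the corresponding stratum is the space of $G$-invariant lines of that type, which is the projectivization of the $G$-isotypic-type subspace; since $\fR$ is $0$-ample, as $\Lambda$ grows this subspace becomes infinite-dimensional, and the analogue of the map inducing an isomorphism on $\pi_2$ in \cref{prp:0d} shows that each stratum stabilizes to $\bC\bP^\infty = K(\bZ,2)$ when $\phi$ is trivial. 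When $\phi$ is non-trivial the relevant $1$-dimensional $\phi$-twisted representation involves an antilinear structure, so the stabilized stratum is the infinite real (or quaternionic-adjusted) projective space, whose homotopy type is $K(\bZ/2,1)$ in low degrees — this accounts for the shift $\bZ$ in degree $2$ being replaced by $\bZ/2$ in degree $1$, and the vanishing in degrees $\geq 2$. I would assemble these pieces to read off $\pi_0$ as $\pi_0$ of the indexing set, and $\pi_n$ for $n\geq 1$ from the stabilized stratum containing the basepoint.

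The main obstacle is controlling the colimit over $\fL_{\fR,0}$ and verifying that the stabilization maps indeed realize the claimed homotopy types on the strata: one must check that $0$-ampleness of $\fR$ guarantees that the isotypic components feeding each stratum grow without bound (so $\bC\bP^\infty$ or $\bR\bP^\infty$ really appears), and that the local H-monoid structure from \cref{lem:H-monoid} makes the colimit well-behaved — in particular that the weak equivalence $\sIP_0(\Lambda\midbar\blank)\simeq \bP\cA_\Lambda$ is compatible with the stabilizing inclusions up to coherent homotopy, which is exactly the kind of statement \cref{rmk:Ainfty} is designed to absorb. I would also need the observation that in degree $0$ the spectral gap constraint and invertibility together force the ground state to be a pure product-type state (here invertibility is automatic since $\Lambda$ is bounded, so every Hamiltonian is invertible with itself-conjugate as inverse), so no subtlety from the refinement sheaf $\bsIP$ arises beyond what \cref{lem:forget.invertible} already gives.

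Finally, for the fermionic cases, the only additional input is that a $G$-invariant even rank-$1$ projection on a $\bZ/2$-graded Hilbert space is the same as a $G$-invariant line of pure parity, so the fermionic answer is obtained from the bosonic one by adding the $\bZ/2$ factor recording parity; the higher homotopy groups are unchanged because the even-projection Grassmannian is (two copies of) the same projective space, exactly as in \cref{prp:0d}. Combining all of this gives the four cases in the statement. I expect no further genuine difficulty once the stratification and the stabilization analysis are in hand; the rest is bookkeeping of twisted representation theory ($\mathrm{Line}(G,\phi)$ is a group under tensor product and the extension/twist cocycle data only affect $\pi_0$).
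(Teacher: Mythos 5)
Your proposal is essentially the same as the paper's proof: both reduce $\sIP_0$ to the colimit of the manifolds $(\bP\cA_{\Lambda})^G$ of $G$-invariant rank-$1$ projections, decompose that manifold into strata indexed by the isomorphism class $\pi \in \mathrm{Line}(G,\phi)$ of the resulting $1$-dimensional $G$-line, identify each stratum with a complex (resp.\ real, when $\phi$ is non-trivial) projective space arising from the multiplicity space of $\pi$, and use $0$-ampleness to ensure every $\pi$ appears and the projective spaces stabilize to $\bC\bP^\infty$ (resp.\ $\bR\bP^\infty \simeq K(\bZ/2,1)$), with the fermionic case contributing the parity $\bZ/2$ in $\pi_0$. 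Two minor notes: the ``quaternionic-adjusted'' possibility you flag does not arise, since $\mathrm{End}_G(\pi) \cong \bR$ for any $1$-dimensional $\phi$-twisted representation with $\phi$ non-trivial (Schur plus antilinearity), so only real projective spaces occur; and your care about the colimit stabilization (multiplicity growth via tensoring with copies of $\Lambda$, coherence of the stabilizing inclusions via \cref{rmk:Ainfty}) is a useful supplement to the paper's more compressed ``verified in the same way as \cref{prp:0d}.''
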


\begin{proof}
    It is verified in the same way as \cref{prp:0d} that 
    \begin{align*} 
    \IP_0^G = \colim (\bP \cA_{\Lambda})^G, \quad  (\bP \cA_{\Lambda})^G = \bigsqcup_{\pi \in \mathrm{Line}(G)} \bP \cB(\sH_{\Lambda,\pi}), 
    \end{align*}
    where $\sH_{\Lambda,\pi}$ denotes the subrepresentation of $\bigotimes_{\lambda \in \Lambda} \sH_\lambda$ corresponding to the irreducible representation $\pi$. Since each $\bP \cB(\sH_{\Lambda,\pi})$ is a complex projective space, this shows the bosonic part of (1). The fermionic part is verified in the same way. 
    
    When $\phi$ is non-trivial, the group $G$ acts on each $\cB(\sH_{\Lambda,\pi})$ by complex conjugation, and hence the space $(\bP\cB(\sH_{\Lambda,\pi}))^G$ of $G$-invariant rank $1$ projections is identified with a real projective space. 
    The colimit has the homotopy type of $\bR\bP^\infty =B\bZ/2$, which proves (2). The fermionic version is proved in the same way. 
    Indeed, the spaces $\fIP_0^G$ and $\prescript{\phi}{}{\fIP}_0^G$ are the two copies, corresponding to the $\bZ/2$-grading, of the complex and the real projective spaces respectively.
\end{proof}

\begin{prp}\label{prp:Brauer.1d}
Let $\mathrm{Br} (G,\phi)$ (resp.\ $\widehat{\mathrm{Br}}(G,\phi)$) denote the Brauer group, i.e., the set of isomorphism classes of infinite dimensional type I factors (resp.\ $\bZ/2$-graded type I factors) with $(G,\phi)$-actions, whose group structure is imposed by the tensor product (resp.\ the graded tensor product). 
The index maps
\begin{align*}
    \cI_{G,\phi}^1 \colon \pi_0(\prescript{\phi}{}{\IP}_1^G) \to \mathrm{Br}(G,\phi), \quad \cI_{G,\phi}^1 \colon \pi_0(\prescript{\phi}{}{\fIP}_1^G) \to \widehat{\mathrm{Br}}(G,\phi),
\end{align*}
given by $\cI_{G,\phi}^1(\sfH)=\pi_{\omega_{\sfH}}(\cA_{Y_L})''$ (cf.\ \cref{prp:1d} (2)) is well-defined. Moreover, it is injective if $\fR$ is $0$-ample and is surjective if $\fR$ is $1$-ample.  
\end{prp}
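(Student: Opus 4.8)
The plan is to mirror the structure of the proof of \cref{prp:1d}, upgrading each step to keep track of the $(G,\phi)$-action. First I would establish well-definedness of $\cI^1_{G,\phi}$: for $\sfH \in \bsIP^G_1(\pt)$, the split property (\cref{lem:split.1d}, or its fermionic analogue in the paragraph before \cref{prp:left.equivalent.state.unitary.fermion}) shows that $\pi_{\omega_{\sfH}}(\cA_{Y_L})''$ is a type I factor; since $\omega_{\sfH}$ is $G$-invariant (the $G$-equivariant version of \cref{lem:fermion.onsite.invariant.analysis}, valid because $\sfH$ is $G$-invariantly invertible), the $G$-action on $\cA_{Y_L}$ extends to this von Neumann algebra, giving a class in $\mathrm{Br}(G,\phi)$ (resp.\ $\widehat{\mathrm{Br}}(G,\phi)$). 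Independence of the choice of reference halfspace decomposition and invariance under $G$-equivariant smooth homotopy follow exactly as in \cref{prp:1d} (2): a $G$-invariant smooth path $\sfH \in \sfIP^G_1([0,1])$ yields, via \cref{thm:automorphic.equivalence} and \cref{lem:LGA.cone.decomposition}, an $\cF$-asymptotic equivalence $\omega_{\sfH_0}|_{\cA_{Y_L}} \sim \omega_{\sfH_1}\circ\alpha(\Pi_{Y_L}(\sfG_{\sfH});1)^{-1}|_{\cA_{Y_L}}$, and the automorphism $\alpha(\Pi_{Y_L}(\sfG_{\sfH});1)$ is $G$-equivariant since $\sfG_{\sfH}$ is; quasi-equivalence (\cite{bratteliOperatorAlgebrasQuantum1987}*{Corollary 2.6.11}) then gives a $G$-equivariant isomorphism $\pi_{\omega_{\sfH_0}}(\cA_{Y_L})'' \cong \pi_{\omega_{\sfH_1}}(\cA_{Y_L})''$. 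Additivity of $\cI^1_{G,\phi}$ under $\boxtimes$ follows from $\pi_{\omega_{\sfH_1}\otimes\omega_{\sfH_2}}(\cA_{Y_L})'' \cong \pi_{\omega_{\sfH_1}}(\cA_{Y_L})'' \,\bar{\otimes}\, \pi_{\omega_{\sfH_2}}(\cA_{Y_L})''$ (graded tensor product in the fermionic case).

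For injectivity under $0$-ampleness: suppose $\cI^1_{G,\phi}(\sfH)$ is the trivial class, i.e., $\pi_{\omega_{\sfH}}(\cA_{Y_L})''$ is $G$-equivariantly isomorphic to $\cB(\sH)$ with the standard (conjugation-twisted) $G$-action. Then I would invoke a $G$-equivariant refinement of \cref{prp:left.equivalent.state.unitary} (resp.\ \cref{prp:left.equivalent.state.unitary.fermion}): the vacuum vector admits a $G$-equivariant Schmidt-type decomposition, and the vector state $\omega'$ picked out from the leading term is $G$-invariant and $\cF$-asymptotically equivalent to $\omega_{\sfH}$; the $G$-equivariant version of \cref{lem:asymp.equal.state.invertible} (whose proof only uses $G$-invariant data: the null-homotopy $\overline{\sfH}$ is $G$-invariant, $\alpha$ is $G$-equivariant, and the unitary produced by \cref{lem:asymp.equal.state.trivial} can be chosen $G$-invariant since all the projections $P_{\xi_r}$ are built from $G$-invariant vectors) then yields a $G$-invariant almost local unitary $u$ with $\omega_{\sfH}\circ\Ad(u) = \omega_{Y_L}\otimes\omega_{Y_R}$. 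Then $\Theta_{Y_L,\omega}$ and $\Theta_{Y_R,\omega}$ (\cref{prp:cut.trivial.Hamiltonian}, applied $G$-equivariantly) produce $G$-invariant Hamiltonians $\acute{\sfH}_L \boxtimes \acute{\sfH}_R$ smoothly $G$-homotopic to $\sfH$; but each of $\acute{\sfH}_L$, $\acute{\sfH}_R$ lies in the $G$-equivariant version of the weakly contractible sheaf $\bsIP_1(\blank)_L$ (resp.\ $_R$) of \cref{prp:Eilenberg.swindle}, whose Eilenberg-swindle proof is manifestly equivariant (the shifted lattice $\widetilde\Lambda$ carries a $G$-action and $\widetilde T_{d,*}$ is $G$-equivariant). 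Hence $[\sfH] = 0$ in $\pi_0(\prescript{\phi}{}{\IP}^G_1)$, and since $\sfH$ was $G$-invariantly invertible this is a genuine triviality statement in the group. For surjectivity under $1$-ampleness: given any class $[M] \in \mathrm{Br}(G,\phi)$ represented by a $(\phi,\tau)$-twisted representation $(\sV,v)$ with a $G$-equivariant embedding $\cB(\sV) \hookrightarrow \cA_{\lambda_1}\otimes\cdots\otimes\cA_{\lambda_l}$, I would build a Fidkowski--Kitaev-type chain as in \cref{exmp:FK}: place on a one-dimensional sublattice the $\bZ/2$-graded (twisted) building blocks so that a bond Hamiltonian $\sfH_{\bm x}$ acting across adjacent sites realizes the desired half-space von Neumann algebra; invertibility is checked via \cref{prp:Eilenberg.swindle} applied to $\sfH\boxtimes\sfH$ exactly as in \cref{exmp:FK}, and the half-space factor computation $\cB(\sH)\,\hat\otimes\,M \cong \pi_\omega(\cA_{Y_L}\,\hat\otimes\,\text{(boundary block)})''$ extracts $[M]$.

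The main obstacle I expect is the $G$-equivariant version of \cref{prp:left.equivalent.state.unitary} and \cref{lem:asymp.equal.state.invertible} — that is, upgrading the production of the intertwining almost local unitary $u$ so that it can be chosen $G$-invariant. The non-equivariant proof of \cref{lem:asymp.equal.state.trivial} constructs $u$ as an infinite product of polar-decomposition unitaries $u_r = a_r(a_r^*a_r)^{-1/2}$ built from rank-one projections $P_{\xi_r}$; equivariance requires checking that when all input vectors and states are $G$-invariant, these projections and hence the $u_r$ are $G$-invariant (using that $G$-invariance of $\xi$ forces $\nu\sfs$-type invariance of the Schmidt components, and that the conditional expectations $\Pi_{\bm x,r}$ and the trace are $G$-equivariant). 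In the twisted/antilinear case one must be careful that "$G$-invariant unitary" means $v(g) u v(g)^{-1} = u$ with the appropriate (anti)linear conjugation, and that the polar decomposition respects antilinear structure. Once the equivariant splitting lemma is in hand, the remaining steps are routine adaptations of the already-established equivariant machinery (\cref{thm:automorphic.equivalence}, \cref{prp:cut.trivial.Hamiltonian}, \cref{prp:Eilenberg.swindle}, all of which were noted to hold $G$-equivariantly in \cref{subsection:SPT}).
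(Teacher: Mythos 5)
The well-definedness and surjectivity parts of your proposal match the paper's argument closely; both reduce to $G$-equivariant reruns of the proof of \cref{prp:1d} (2) and to the equivariant Fidkowski--Kitaev chain of \cref{exmp:equivariant.FK}, respectively. But your injectivity argument has a genuine gap exactly at the point you flag as the expected obstacle — and your suggested resolution is false.

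You assert that the leading Schmidt vector $\xi_1\otimes\eta_1$ of a $G$-invariant vacuum $\Omega=\sum_i\lambda_i\,\xi_i\otimes\eta_i\in\sH_L\otimes\sH_R$ can be taken $G$-invariant (``the vector state $\omega'$ picked out from the leading term is $G$-invariant'', ``$G$-invariance of $\xi$ forces $\nu\sfs$-type invariance of the Schmidt components''). This is not true. $G$-invariance of $\Omega$ only constrains the $G$-action to permute and rephase the Schmidt components compatibly across the two tensor factors; the individual $\xi_i$, $\eta_i$, and hence the rank-one projections $P_{\xi_r}$ used in \cref{lem:asymp.equal.state.trivial}, need not be fixed. (Already $G=\bZ/2$ acting on $\bC^2\otimes\bC^2$ by the flip of a basis gives a $G$-invariant maximally entangled vector whose Schmidt components are permuted, not fixed.) Consequently the almost-local unitary $u$ produced by the polar-decomposition chain cannot in general be chosen $G$-invariant, and the splitting $\omega_\sfH\circ\Ad(u)=\omega_{Y_L}\otimes\omega_{Y_R}$ into $G$-invariant halves does not follow. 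Notice also that your injectivity argument never uses the $0$-ampleness hypothesis, which should itself have been a warning sign.

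The paper resolves this precisely with $0$-ampleness. One observes that there are finite-dimensional irreducible subrepresentations $\sE_L\leq\sH_L$, $\sE_R\leq\sH_R$ such that $\sE_L\otimes\sE_R$ does contain a $G$-invariant unit vector $\Omega_0$ (giving a $G$-equivariant isomorphism $\sE_L^*\cong\sE_R$), and then uses $0$-ampleness to realize $\sE_L^*$ and $\sE_R^*$ as direct summands inside tensor products $\sH_{\lambda_{1,\bullet}}\otimes\cdots\otimes\sH_{\lambda_{k,\bullet}}$ for $\lambda_{j,\bullet}\in\fR$. This furnishes an auxiliary $G$-invariant IG UAL Hamiltonian $\sfH_0$ supported on two sites $\{\bm{1},\bm{-1}\}$ with ground state $\Omega_0^*$. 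After tensoring $\sfH\boxtimes\sfH_0$, the new ground state is $\cF$-asymptotically equivalent to the genuinely $G$-invariant split vector $\Omega_0\otimes\Omega_0^*$, at which point the non-equivariant mechanism of \cref{prp:1d} (2) does go through equivariantly. Finally, $\sfH_0$ itself is killed by an application of \cref{prp:Eilenberg.swindle}, giving $[\sfH]=0$. Your proof sketch needs this auxiliary-system trick inserted in place of the claimed $G$-invariance of the leading Schmidt component.
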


\begin{proof}
    The well-definedness (i.e., the homotopy invariance) is proved in the same way as \cref{prp:1d} (2). 
    The surjectivity follows from the construction of the equivariant Fidkowski--Kitaev model given in the following \cref{exmp:equivariant.FK}, that is also given in \cite{carvalhoClassificationSymmetryProtected2024}*{Lemma 3.14}.
    The injectivity is also almost the same as \cref{prp:1d}, but we must care for the proof of the equivariant version of \cref{prp:left.equivalent.state.unitary,prp:left.equivalent.state.unitary.fermion}. 

    In the equivariant case, $\cI_{G,\phi}([\sfH] ) =0 $ means that the GNS Hilbert space $(\sH,\pi)$ for $\omega_{\sfH}$ decomposes to the tensor product of unitary representations $\sH_L$ and $\sH_R$ of $G$. 
    A point is that, even if a unit vector $\Omega \in \sH=\sH_L \otimes \sH_R$, with the Schmidt decomposition $\Omega = \sum_i \lambda_i (\xi_i \otimes \eta_i)$, is $G$-invariant in total, a split vector $\xi_1 \otimes \eta_1 \in \sH_L \otimes \sH_R$ can not be taken $G$-invariant in general. The same issue is dealt with in \cite{carvalhoClassificationSymmetryProtected2024}*{Proposition 5.2}. 
    In general, there are finite dimensional irreducible subrepresentations $\sE_L \leq \sH_L$ and $\sE_R\leq \sH_R$ such that $\sE_L \otimes \sE_R$ has a $G$-invariant unit vector $\Omega_0$ that represents an $G$-equivariant isomorphism of $\sE_L^* \cong \sE_R$. 
    By assumption of $0$-ampleness, for $\bullet = L,R$, there is $\lambda_{1,\bullet}, \cdots, \lambda_{n,\bullet} \in \fR $ such that $\sH_{\lambda_{1,\bullet}} \otimes \cdots \otimes \sH_{\lambda_{k,\bullet}}$ has a direct summand $\sE_{\bullet}'$ isomorphic to $\sE_{\bullet}^*$. Let $\sfH_0$ be the $G$-invariant IG UAL Hamiltonian supported on $\{ \bm{1}, \bm{-1} \} \subset \bR$ whose ground state is given by the unit vector
    \[
    \Omega_0^* \in \sE_L^* \otimes \sE_R^* \cong \sE_L' \otimes \sE_R' \subset (\sH_{\lambda_{1,L}} \otimes \cdots \otimes \sH_{\lambda_{k,L}})_{\bm{-1}} \otimes (\sH_{\lambda_{1,R}} \otimes \cdots \otimes \sH_{\lambda_{k,R}})_{\bm{1}}.
    \]
    Then, the ground state of $\sfH \boxtimes \sfH_0$ is asymptotically equivalent to the split vector $\Omega_0 \otimes \Omega_0^* \in (\sE_L \otimes \sE_L') \otimes (\sE_R \otimes \sE_R')$ under the identification $\sE_L' \cong \sE_L^* \cong \sE_R$. 
    Applying the remaining argument of the proof of the proof of \cref{prp:1d} (2), we get a $G$-invariant smooth null-homotopy of $\sfH \boxtimes \sfH_0$. Finally, again by \cref{prp:Eilenberg.swindle}, the tensor component $\sfH_0$ is also null-homotopic in $\prescript{\phi}{}{\sIP}_1^G$, and hence that the same is true for $\sfH$. 
\end{proof}

\begin{exmp}\label{exmp:equivariant.FK}
    The equivariant version of \cref{exmp:FK} is defined in the following way. Here, we discuss the fermionic case, but the same construction also works for the bosonic case.
    
    Let $\cB$ be a $\bZ/2$-graded finite dimensional C*-algebra that is graded simple (in other words, a finite dimensional graded type I factor), equipped with a $\phi$-twisted action of $G$.   
    Then there is another $\bZ/2$-graded simple finite dimensional C*-algebra $\check{\cB}$ such that $\cA_{\lambda} \coloneqq \cB \hotimes \check{\cB}$ is isomorphic to $\cB(\sK)$ for some $\phi$-linear unitary representation $(\sK,\pi)$ that contains a $G$-invariant unit vector $\Omega$. 
    Let $\fR$ be a set of fermionic internal degrees of freedom with on-site $(G,\phi)$-symmetry containing $\lambda \coloneqq (\sK,\pi,\Omega)$.
    Let $\Lambda \coloneqq \bZ \times \{\lambda\} $, $\cA_{\Lambda}$ and define an IG UAL Hamiltonian on $\cA_{\Lambda}$. 

    In the same way as \cref{exmp:FK}, consider the graded tensor decomposition of the on-site observable algebra as $\cA_{\bm{x}} \cong \cB_{\bm{x}} \hotimes \check{\cB}_{\bm{x}}$ for each $\bm{x} \in \Lambda$ and take $\sfH_{\sK,\bm{x}} \in \cB_{\bm{x}} \hotimes \check{\cB}_{\bm{x+1}} \cong \cB(\sK)$ to be the local gapped Hamiltonian whose unique ground state is $\Omega$. 
    Then the family $\sfH_{\cB} \coloneqq (\sfH _{\cB,\bm{x}} )_{\bm{x}} $ forms a gapped UAL Hamiltonian with the homotopy inverse $\sfH_{\check{\cB}}$.   
    Now, since $\check{\cB} \hotimes \cB(\sH_L)$ is isomorphic to a bounded operator algebra, we get 
    \[
    \cI^1_{G,\phi}(\sfH) + [\check{\cB}] =\cI^1_{G,\phi}(\sfH) - [\cB] = 0, \quad \text{i.e., } \quad \cI^1_{G,\phi}(\sfH)= [\cB]. 
    \]
    This shows the surjectivity of $\cI^1_{G,\phi}$ under the assumption that $\fR$ is $1$-ample. 
\end{exmp}

\begin{rmk}\label{rmk:equivariant.cohomology}
    An element of $\mathrm{Line}(G,\phi)$ is represented by a continuous map $\pi \colon G \to \bT$ satisfying the $1$-cocycle relation $\pi(g)\prescript{\phi(g)}{}{\pi(h)} = \pi(gh)$. 
    That is, $\mathrm{Line}(G)=\Hom(G,\bT) \cong \mathrm{H}^1(G\,; \bT)$ and $\mathrm{Line}(G,\phi) \cong \mathrm{H}^1(G \,; \bT_\phi )$, where $\bT_\phi$ denotes the group $\bT$ on which $G$ acts by complex conjugation through $\phi$. 

    Similarly, the equivariant Dixmier--Douady theory given in \cite{kumjianBrauerGroupLocally1998} (see \cite{kubotaNotesTwistedEquivariant2016}*{Section 2} for $\phi$-twisted case) states that $\mathrm{Br}(G,\phi)$ is isomorphic to the group $\mathrm{Ext}(G,\mathbb{T}_{\phi})$ of (continuous) extensions of $G$ by the $G$-module $\mathbb{T}_\phi$. In the fermionic case, Wall \cite{wallGradedBrauerGroups1964} describes the graded Brauer group as
    \[
    \widehat{\mathrm{Br}}(G,\phi) \cong \bZ/2 \oplus_{} \Hom(G,\bZ/2) \oplus \mathrm{Ext}(G,\mathbb{T}_\phi),
    \]
    where the right hand side is equipped with the structure of an abelian group via the Wall group law (see also \cite{bourneClassificationSymmetryProtected2021}). For example, when $G=\bZ/2$ and $\phi=\id$, the right hand side is $\bZ/8$ and is generated by real Clifford algebras.
    The general statement scoping the $\phi$-twisted version is found in \cite{kubotaNotesTwistedEquivariant2016}*{Section 2}. The extension group $\mathrm{Ext}(G,\mathbb{T}_\phi)$ is isomorphic to the second cohomology group $\mathrm{H}^2(G \,; \bT_{\phi})$.

    The cohomology groups $\mathrm{H}^n(G \,; \bT_{\phi})$ appeared above specifically refer to the `continuous' cohomology group in the sense of Segal and Wigner \cites{segalCohomologyTopologicalGroups1970,wignerAlgebraicCohomologyTopological1970} (\cite{tuGroupoidCohomologyExtensions2006} is also a nice reference), that is, the cohomology group of the simplicial sheaf of continuous $\bT$-valued functions over the simplicial topological space $B_{\bullet} G$ twisted by $\phi$. 
\end{rmk}

\begin{rmk}\label{rmk:comparison.equivariant.cohomology}
    \cref{prp:0d.G,prp:Brauer.1d,rmk:equivariant.cohomology} shows that the domain and the range of the SPT index in \cref{defn:G.index} are isomorphic. 
    More strongly, we can show that these isomorphisms are given by $\mathrm{Ind}_{G,\phi}^d$ when $d \leq 1$. Indeed, as a topological analogue of \cref{prp:0d.G,prp:Brauer.1d}, we can define isomorphisms
    \begin{align*}
        \cI_{\sM,\phi}^0 \colon \prescript{\phi}{}{\rIP}^0(\sM) \to \mathrm{Line}(\sM,\phi), \quad \cI_{\sM,\phi}^1 \colon \prescript{\phi}{}{\rIP}^1(\sM) \to \mathrm{Br}(\sM,\phi)
    \end{align*}
    for any $\sM \in \Man $ and $\phi \in \mathrm{H}^1(\sM\,;\bZ/2)$. Here, $\mathrm{Line}(\sM,\phi)$ (resp.\ $\mathrm{Br}(\sM,\phi)$) denotes the group of isomorphisms classes of complex line bundles (resp.\ bundles of type I factors) on $\sM_\phi$ on which the deck transformation by $\bZ/2$ acts antilinearly. 
    Indeed, $\cI_{\sM,\phi}^0([\sfH])$ is the pull-back of the tautological line bundle over $\colim_{\Lambda}  \bP\cA_{\Lambda} \simeq \IP_0$. Similarly, $\cI_{\sM,\phi}^1$ is given by
    \[
    \cI_{\sM,\phi}^1([\sfH]) \coloneqq \Big[ \bigsqcup_{p \in \sM}\pi_{\omega_{\sfH(p)}}(\cA_{Y_L})''\Big] \in \mathrm{Br}(\sM,\phi).
    \]
    In fact, as detailed in the forthcoming paper \cite{kubotaDixmierDouadyTheoryHigher}, this forms a bundle of type I factors. 
    For a general space like $BG$, we can define $\cI_{BG, \phi}$ by taking the colimit. 
    They eventually give the following commutative diagrams 
    \begin{align*}
        \xymatrix{
        \prescript{\phi}{}{\rIP}_G^0(\pt) \ar[r]^{\cI_{G,\phi}^0}_{\cong} \ar[d]^{\mathrm{Ind}_G^0} & \mathrm{Line}(G,\phi) \ar[d]^{\pr_{EG}^*}_{\cong} \\
        \prescript{\phi}{}{\rIP}^0(BG) \ar[r]^{\cI_{BG,\phi}^0}_{\cong}  & \mathrm{Line}(BG,\phi),  \\
        }
        \quad 
        \xymatrix{
        \prescript{\phi}{}{\rIP}_G^1(\pt) \ar[r]^{\cI_{G,\phi}^1}_{\cong} \ar[d]^{\mathrm{Ind}_G^1} & \mathrm{Br}(G,\phi) \ar[d]^{\pr_{EG}^*}_{\cong} \\
        \prescript{\phi}{}{\rIP}^1(BG) \ar[r]^{\cI_{BG,\phi}^1}_{\cong}  & \mathrm{Br}(BG,\phi). \\
        }
    \end{align*}
    The same holds for the fermionic versions.
\end{rmk}

\begin{rmk}\label{cor:equivariant.DW.cohomology}
    It will be proved in the subsequent paper \cite{kubotaStableHomotopyTheory2025b} that the lattice Dijkgraaf--Witten model given in \cites{chenSymmetryProtectedTopological2013,ogataClassificationPureStates2021} is generalized to a morphism of twisted $G$-equivariant cohomology functors 
    \[
    \prescript{\phi}{}{\mathrm{DW}}_{d}^G \colon \mathrm{H}^{d+1}_G(\sM  \,; \underline{\bT}{}_\phi) \to \prescript{\phi}{}{\rIP}^{d+2}_G(\sM)
    \]
    such that the following diagram commutes;
   \begin{align*}
   \xymatrix{
        \mathrm{H}^{d+1}_G(\sM\, ;\underline{\bT}{}_{\phi}) \ar[r]^{\DW_d^G} \ar[d]^{\mathrm{pr}_{EG}^*}_{\cong }  &  \prescript{\phi}{}{\rIP}_G^d(\sM ) \ar[d]^{\mathrm{Ind}_{G,\phi}} \\
        \mathrm{H}^{d+1}(\sM \times_G EG \, ; \underline{\bT}{}_{\phi})\ar[r]^{\DW_d } & \prescript{\phi}{}{\rIP}^{d}(\sM \times_G EG). 
   }
   \end{align*}
   In particular, when $d=2$, the map $\DW_d$ below is an isomorphism by \cref{prp:2d}. This means that  $\mathrm{Ind}_{G,\phi}$ has a left inverse, and hence $\prescript{\phi}{}{\rIP}_G^d(\sM )$ contains $ \mathrm{H}^{d+1}_G(\sM\, ;\underline{\bT}{}_{\phi})$ as a direct summand. 
\end{rmk}

\begin{exmp}\label{exmp:list}
    The results on invertible SPT phases listed here are understood as a consequence of \cref{cor:spectrum} and the discussion in this section. Here, the set of internal degrees of freedom $\fR$, which is abbreviated, is assumed to be $0$-ample, $1$-ample, and to satisfy the assumption of \cref{lem:equivariant.spectrum}. 
    \begin{enumerate}
        \item When $d =1$, the SPT index is a group isomorphism
        \begin{align*}
        \mathrm{Ind}_G^1 \colon \mathrm{SPT}_1(G) = \rIP^1_G(\pt) \to  \rIP^1(BG) \cong \mathrm{H}^3(BG \, ;\bZ)
        \end{align*}
        by \cref{rmk:comparison.equivariant.cohomology}.  
        Such a homomorphism is defined by Ogata \cite{ogataClassificationGappedHamiltonians2019} and Kapustin--Sopenko--Yang \cite{kapustinClassificationInvertiblePhases2021} when $G$ is finite, and extended to general compact group by Kapustin--Sopenko \cite{kapustinAnomalousSymmetriesQuantum2024} and Carvalho--de Roeck--Jappens \cite{carvalhoClassificationSymmetryProtected2024}. 
        The isomorphism is proved in \cite{ogataClassificationPureStates2021} for finite $G$, and in \cite{carvalhoClassificationSymmetryProtected2024} for general $G$. 
        \item The SPT index for a $G$-Thouless pump, i.e., a based loop of $G$-invariant IG UAL Hamiltonians, is a group homomorphism
        \[
        \mathrm{Ind}_{G,(S^1,\pt)}^1 \colon \rIP^1_G(S^1,\pt) \to \rIP^1(S^1 \wedge BG_+) \cong \rIP^0(BG) \cong \mathrm{H}^2(BG\,; \bZ). 
        \]
        Such a topological invariant is defined by 
        Bachmann--De Roeck--Fraass--Jappens \cite{bachmannClassificationGchargeThouless2023}.
        The map $\mathrm{Ind}_{G,(S^1,\pt)}^1$ is identical to the composition
        \begin{align*} 
        \mathrm{Ind}_G^0 \circ \vartheta_1 \colon \rIP_G^1(S^1,\pt) \to \rIP_G^0(\pt) \to \rIP^0(BG) \cong \mathrm{H}^2(BG \, ; \bZ ),
        \end{align*}
        and hence is an isomorphism by \cref{cor:fermionic.equivariant.spectrum} and \cref{rmk:comparison.equivariant.cohomology}. 
        \item The $2$-dimensional $G$-Thouless pump is considered similarly. The topological invariant is obtained by  
        \begin{align*} 
        \mathrm{Ind}_{G,(S^1,\pt)}^2 \colon \rIP_G^2(S^1,\pt) \to \rIP^2(S^1 \wedge BG_+) \cong \rIP^1(BG) \cong \mathrm{H}^3(BG\,;\bZ). 
        \end{align*}
        It is identical to the composition $\mathrm{Ind}_G^1 \circ \vartheta_2$, and hence is an isomorphism by \cref{cor:fermionic.equivariant.spectrum} and \cref{rmk:comparison.equivariant.cohomology}.
        \item The above (1)--(3) can be generalized to the case that $\phi$ is non-trivial. We obtain topological invariants
        \begin{align*}
            \mathrm{Ind}_{G,\phi}^1 \colon {}&{}\prescript{\phi}{}{\rIP}^1_G(\pt) \to   \mathrm{H}^3(BG \, ;\bZ_{\phi}),\\
            \mathrm{Ind}_{G,\phi,(S^1,\pt)}^d \colon {}&{} \prescript{\phi}{}{\rIP}^d_G(S^1,\pt) \cong \prescript{\phi}{}{\rIP}^{d-1}_G (\pt ) \to \mathrm{H}^{d+1}(BG\,; \bZ_\phi) 
        \end{align*}
        for $d=1,2$, which are isomorphisms by \cref{rmk:comparison.equivariant.cohomology}. A typical example is the case of $G=\bZ/2$ and $\phi = \id$, namely, the SPT phases protected by the time-reversal symmetry. In this case, the SPT index gives an isomorphism $\mathrm{SPT}(G,\phi)\to \bZ/2$.
        Such a topological invariant is defined by Ogata \cite{ogataValuedIndexSymmetryprotected2021}.  
        \item The fermionic version of (1)-(4) is considered as well. For example, the fermionic SPT index is a group isomorphism
            \begin{align*}
            \mathrm{Ind}_{G,\phi}^1 \colon \prescript{\phi}{}{\rfIP}^1_G(\pt) \to \prescript{\phi}{}{\rfIP}^1(BG) \cong \bZ/2 \oplus \mathrm{H}^1(G\,; \bZ/2) \oplus \mathrm{H}^3(G\,; \bZ_\phi )
        \end{align*}
        by \cref{rmk:comparison.equivariant.cohomology}. 
        Such an invariant is studied by Bourne--Ogata \cite{bourneClassificationSymmetryProtected2021}. In particular, when $G=\bZ/2$ and $\phi=\id$, it realizes the $8$-fold classification of the Majorana fermion by Fidkowski--Kitaev \cite{fidkowskiTopologicalPhasesFermions2011}. 
        Tasaki \cite{tasakiRigorousIndexTheory2023} also gives an explicit description of the topological invariant in the presence of charge-conjugation and particle-hole symmetries. 
        \item When $d =2$, the SPT index restricts to a group homomorphism
        \begin{align*}
        \mathrm{Ind}_{G,\phi}^2 \colon {}&{} \mathrm{SPT}_2(G,\phi)  \to \prescript{\phi}{}{\rIP}^2(BG, \pt) \cong \mathrm{H}^4(BG\,; \bZ_{\phi}),
        \end{align*}
        where the isomorphism at the right hand side is given by \cref{prp:2d}. Such a topological invariant is defined by Ogata \cite{ogataValuedIndexSymmetryprotected2021} and Sopenko \cite{sopenkoIndexTwodimensionalSPT2021}. Moreover, we obtain that it is split surjective by \cref{cor:equivariant.DW.cohomology}. 

        The fermionic case is defined similarly. If $\phi$ is trivial, then the SPT index is
        \begin{align*}
            \mathrm{Ind}_{G}^2 \colon{}&{} \mathrm{fSPT}_2(G)  \to \rfIP^2(BG, \pt) \cong \mathrm{SH}^4(BG).
        \end{align*}
        Such a topological invariant is studied by Ogata  \cites{ogataInvariantSymmetryProtected2022,ogata2DFermionicSPT2023}.
        The invariant constructed there takes value in a set of group cochains, which is larger than the supercohomology group. 
        Our SPT index takes value precisely in the group originally proposed by \cite{guSymmetryprotectedTopologicalOrders2014} (cf.\ the remark below \cref{prp:2d}). 
        \item In \cref{exmp:reflection.phase,exmp:reflection.onsite}, we deal with the invertible phase of reflection invariant systems in dimension $1$. The former example recovers the result of Ogata \cite{ogataIndexSymmetryProtected2021}.   
        \item Let $d$ be arbitrary and let $G=U(1)$. Then, for $\sM \in \Man $ with trivial $G$-action, the rational SPT index takes value in
        \begin{align*}
        \ch_{\mathrm{top}} \circ \mathrm{Ind}_{U(1)}^d \colon \rIP^d_{U(1)}(\sM) \to \rIP^d(\sM \times BU(1)) \to  \mathrm{H}^{d+2}(\sM \times BU(1) \, ;\bQ).
        \end{align*}
        The cohomology group at the right hand side is isomorphic to $\bigoplus_{k \geq 0}\mathrm{H}^{d+2-2k}(\sM\,; \bQ)$, which looks similar to the rational connective K-group of $\sM$. 
        When $d=2$, this map is related to the magnetic flux insertion in \cite{kapustinLocalNoetherTheorem2022}*{Section IV.D.2}, which is considered in  \cites{bachmannQuantizationConductanceGapped2018,bachmannManybodyIndexQuantum2020} as a description of the integer quantum Hall effect in terms of interacting Hamiltonians. 
    \end{enumerate}
\end{exmp}

We remark that our framework does not cover recent developments on the Lieb--Mattis--Schulz theorem \cites{ogataLiebSchultzMattisTypeTheorems2019,ogataGeneralLiebSchultzMattisType2021} since we do not allow on-site symmetry to be a projective representation.

\section{Invertible gapped localization flows}\label{section:localizing.path}
This section gives an explicit model of the generalized homology functor $\rIP_n \colon \kTop \to \mathsf{Ab}$ associated with the $\Omega$-spectrum $\IP_*$. 
For a proper metric space $X$ that admits a bi-Lipschitz embedding into a Euclidean space, we define an $\Omega$-spectrum sheaf $\sIP_{\loc ,d}(X \midbar \blank)$ of smooth families of certain semi-infinite paths of IG UAL Hamiltonians, which we call invertible gapped localization flows, placed on $X$. 
We then show that $\rIP_{\loc,n}(X)\coloneqq \pi_n(\sIP_{\loc}(X \midbar \blank))$ satisfies the Eilenberg--Steenrod axiom of generalized homology theory. 
Finally, we prove that the resulting homology functor $\rIP_{\loc,n}$ is naturally isomorphic to the abstractly defined homology functor $\rIP_n(X)$. 

\subsection{Lipschitz continuous coarse geometry of proper metric spaces}\label{subsection:localizing.path}
Consider placing a quantum spin system on a general proper metric space beyond $\bR^d$. 
The notion of growth rate for non-discrete metric spaces is defined in \cite{blockAperiodicTilingsPositive1992} in terms of its quasilattice (we refer to \cite{alvarezlopezGenericCoarseGeometry2018}*{Definition 4.2}). 
For $R>0$ and a monotonously increasing function $Q \colon \bR_{>0} \to \bR_{>0}$, a subset $L$ of a  proper metric space $X$ is said to be an $(R,Q)$-quasilattice if it is relatively dense and $\# (B_r(\bm{x}) \cap L) \leq Q(r)$ for any $r >0$ and $\bm{x} \in X$. 
A proper metric space $X$ having a quasilattice is said to have bounded geometry. 
We say that $X$ has polynomial growth if the function $Q$ can be chosen as a polynomial. 
We remark that any two quasilattices $L_1, L_2$ of $X$ are quasi-isometric (\cite{alvarezlopezGenericCoarseGeometry2018}*{Proposition 4.1}). 

Let $\iota \colon \Lambda \to X$ be a weakly uniformly discrete map in the sense of \cref{subsection:coarse.lattice}, which induces a pseudo-metric on $\Lambda$.
Contrary to what one might expect, the polynomial growth property of $X$ does not ensure that $\Lambda$ also has polynomial growth. 
For example, the disjoint union of open balls with increasing dimension $\bigsqcup_{n \in \bN} B_{2}^{\bR^n}(\bm{x}_n)$, where the centers $\bm{x}_n$ are placed at regular intervals of distance $\gg 1$, has a $(2, Q)$-quasilattice $L \coloneqq \bigsqcup_{n \in \bN} \{\bm{x}_n\}$ that is quasi-isometric to $\bN$, and hence has polynomial growth.
However, if $\Lambda$ is the disjoint union of the standard bases $\{\bm{e}_1^{(n)},\cdots,\bm{e}_n^{(n)}\} \subset B_2^{\bR^n}(\bm{0}) \cong B_{2}^{\bR^n}(\bm{x}_n)$, then it is $(1/4,1)$-uniformly discrete in $X$, but it does not admit a growth function; $\sup_{\bm{x} \in \Lambda}\# B_{2}(\bm{x})= \infty$.
The problem is caused by our assumption on the existence of a $(R,Q)$-quasilattice for \emph{some} $R>0$. 
Indeed, the following holds. 
\begin{lem}\label{lem:equi.polynomial.growth.inherit}
    Let $X$ be a proper metric space that has an $(R,Q)$-quasilattice $L$, where $Q$ is a polynomial. Let $\iota \colon \Lambda \to X$ be an $(R,N)$-weakly uniformly discrete map by the same $R>0$. Then, $\Lambda$ also has polynomial growth with respect to the pseudo-metric induced by $\iota$.  
\end{lem}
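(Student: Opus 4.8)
The plan is to deduce a uniform bound on the number of points of $\Lambda$ in a ball of $X$ from the quasilattice structure of $X$, using the fact that the \emph{same} radius $R>0$ controls both the quasilattice $L$ and the weak uniform discreteness of $\iota$. First I would fix $\bm{x}\in X$ and an arbitrary radius $r>0$, and cover the ball $B_r(\bm{x})$ by balls of radius $R$ centered at points of $L$. Concretely, since $L$ is relatively dense (every point of $X$ is within $R$ of $L$), the set $L\cap B_{r+R}(\bm{x})$ gives such a cover: for any $\bm{y}\in B_r(\bm{x})$ there is $\bm{z}\in L$ with $\rmd(\bm{y},\bm{z})<R$, and then $\bm{z}\in B_{r+R}(\bm{x})$. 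So $B_r(\bm{x})\subset \bigcup_{\bm{z}\in L\cap B_{r+R}(\bm{x})} B_R(\bm{z})$.

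Next I would bound the number of such centers and the number of $\Lambda$-points in each. By the $(R,Q)$-quasilattice property applied at $\bm{x}$ with radius $r+R$ (or rather a covering argument: tile $B_{r+R}(\bm{x})$ by finitely many balls of some fixed radius and use that each contains at most $Q(\cdot)$ points of $L$, or more directly note $\#(L\cap B_{r+R}(\bm{x}))\le Q(r+R)$ after a standard packing estimate using that distinct points of $L$ within a ball are controlled by $Q$), we get $\#\big(L\cap B_{r+R}(\bm{x})\big)\le Q(r+R)$. On the other hand, the $(R,N)$-weak uniform discreteness of $\iota$ says $\#\,\iota^{-1}(B_R(\bm{z}))\le N$ for every $\bm{z}\in X$. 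Combining, the number of points of $\Lambda$ whose image under $\iota$ lies in $B_r(\bm{x})$ is at most $\sum_{\bm{z}\in L\cap B_{r+R}(\bm{x})} \#\,\iota^{-1}(B_R(\bm{z})) \le N\cdot Q(r+R)$. Since $Q$ is a polynomial, $r\mapsto N\cdot Q(r+R)$ is again a polynomial in $r$, and this bound is uniform in $\bm{x}$. Writing $Q_\Lambda(r)=\sup_{\bm{x}\in\Lambda}\#\,\iota^{-1}(B_r(\iota(\bm{x})))\le N\cdot Q(r+R)$ gives exactly the polynomial growth of $\Lambda$ with respect to the induced pseudo-metric $\rmd(\bm{x},\bm{y})=\rmd_X(\iota(\bm{x}),\iota(\bm{y}))$.

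The one point requiring a little care — and the main (minor) obstacle — is the step bounding $\#\big(L\cap B_{r+R}(\bm{x})\big)$ by a polynomial: the quasilattice axiom as stated bounds $\#(B_\rho(\bm{y})\cap L)$ by $Q(\rho)$ for a ball centered at a point $\bm{y}$, so applying it directly to the ball centered at $\bm{x}$ (which need not lie in $L$) needs either the mild observation that enlarging the radius by $R$ lets one recenter at a nearby lattice point, or a one-line packing argument: pick a maximal $R$-separated subset $S$ of $L\cap B_{r+R}(\bm{x})$, note the balls $B_{R/2}(\bm{s})$, $\bm{s}\in S$, are disjoint and contained in $B_{r+2R}(\bm{x})$, and bound $\#S$ by covering — but more simply, since $L$ itself is the lattice, $L\cap B_{r+R}(\bm{x})\subset L\cap B_{r+2R}(\bm{s}_0)$ for any fixed $\bm{s}_0\in L$ within $R$ of $\bm{x}$, which has cardinality $\le Q(r+2R)$. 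Either way the final bound is polynomial, completing the argument. I would present it via the recentering observation since it is the shortest. No deeper input is needed; the hypothesis that $X$ has \emph{polynomial} growth (rather than just bounded geometry) is used precisely to conclude $Q_\Lambda$ is polynomial, and the coincidence of scales $R$ is what makes the weak uniform discreteness of $\iota$ combine cleanly with the quasilattice bound.
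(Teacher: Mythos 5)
Your argument is correct and matches the paper's, which also arrives at the bound $N\cdot Q(r+R)$ by projecting each point of $\Lambda$ to a nearest point of $L$; your covering formulation is just the same count phrased differently. The caveat you raise at the end is unnecessary: the quasilattice definition in the text bounds $\#(B_r(\bm{x})\cap L)$ for arbitrary $\bm{x}\in X$, not only for $\bm{x}\in L$, so no recentering or packing argument is needed (though your workaround is harmless and still yields a polynomial bound).
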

\begin{proof}
    Take a map $\pi \colon \Lambda \to L$ such that $\rmd(\iota(\bm{x}),\pi(\bm{x}))$ attains $\mathrm{dist}(\iota(\bm{x}),L)$ for any $\bm{x} \in \Lambda$. Since $L$ is $R$-relatively dense, we have $\pi^{-1}(\bm{x}) \subset  \iota^{-1} (B_R(\iota(\bm{x}))) $. By assumption on $\Lambda$, this shows $\# \pi^{-1}(\bm{x}) \leq N$. 
     Now, the growth rate function for $\Lambda$ is bounded above by a polynomial $N \cdot Q(r+R)$. 
\end{proof}

This observation leads us to the following refined definition of the polynomial growth property of a proper metric space. 
Here, we say that a family of sets $\{ \Lambda_p\}_{p \in \sM}$ is a lower-hemicontinuous family over $X$ if the sets $\Lambda_p$ are all contained in a single set $\bigcup_p \Lambda_p$ equipped with a map $\bigcup_p\iota_p\colon \bigcup_p \Lambda_p \to X$ and, for any $p \in \sM$, there is an open neighborhood $\sU \ni p$ such that $\Lambda_{p}\subset \Lambda_{q}$ if $q \in \sU$.

\begin{defn}\label{defn:equi.coarse}
Let $X$ be a  proper metric space. 
    \begin{enumerate}
        \item We say that $X$ has \emph{equi-bounded geometry} if there is $R>0$ and a monotonously increasing function $Q \colon \bR_{\geq 0} \to \bR_{\geq 0}$ such that there is an $(s^{-1}R,Q(s \cdot \blank))$-quasilattice for \emph{any} $s\in [1,\infty)$. 
        \item We say that $X$ has \emph{equi-polynomial growth} if it is equi-bounded geometry and the growth rate function $Q$ is taken to be a polynomial. 
        \item We say that a lower-hemicontinuous family $\mathbbl{\Lambda}\coloneqq\{ \Lambda _s \}_{s \in [1,\infty)} $ of discrete proper metric spaces over $X$ is $(r,N)$-\emph{weakly equi-uniformly discrete} in $X$ if each $\iota_s \colon \Lambda_s \to X$ is $(s^{-1}r,N)$-uniformly discrete for any $s \in [1,\infty)$. 
        \item We say that a family $\mathbbl{\Lambda}\coloneqq\{ \Lambda _s \}_{s \in [1,\infty)}$ of discrete proper metric spaces has \emph{equi-polynomial growth} if there is $\kappa_{\mathbbl{\Lambda}} >0$, $l_{\mathbbl{\Lambda}} >0$ such that $Q_{\mathbbl{\Lambda}}(r) \coloneqq \kappa_{\mathbbl{\Lambda}} (1+r)^{l_{\mathbbl{\Lambda}}}$ satisfies $\# B_{r}^{\Lambda_s}(\bm{x}) \leq Q_{\mathbbl{\Lambda}}(s^{-1}r)$ for any $s\in [1,\infty)$ and $\bm{x} \in \Lambda_s$. 
        \item We say that a family of maps $F_s \colon \Lambda_s \to X$ is \emph{equi-linearly proper} if there is a function $\varphi(r)=m(1+r)$ such that $\mathrm{diam} (F_s^{-1}(B_{r}(\bm{x}))) \leq\varphi(sr)$ for any $s \in [1,\infty)$. 
    \end{enumerate}
\end{defn}
For a proper metric space $X$ and $s \in[1,\infty)$, we write $\rmd_s$ for the rescaled metric on $X$;
\[
    \rmd_s(\bm{x},\bm{y}) \coloneqq s \cdot \rmd(\bm{x},\bm{y}). 
\]
Let $\Box X$ denote the box space of the rescaled copies of $X$, i.e., the disjoint union $\bigsqcup _{s \in [1,\infty)} (X,\rmd_{s})$ placed so that the different components are infinitely distant. 
By definition, $X$ has uniformly bounded geometry (resp.\ uniformly polynomial growth) if and only if $\Box X$ has bounded geometry (resp.\ polynomial growth). 
Indeed, if $\{L_s\}_{s \in [1,\infty)}$ is a family of $(s^{-1}R,Q(s\cdot \blank))$-quasilattices of $X$, then $\bigsqcup_{s \in [1,\infty)} L_{s} \subset \Box X$ is a quasilattice of $\Box X$. 
Similarly, $F_s$ is equi-linearly proper if and only if $\bigsqcup_{s \in [1,\infty)}F_{s} \colon \bigsqcup_{s} \Lambda_{s} \to \Box X$ is linearly proper.
In particular, if a Lipschitz continuous map $F \colon X \to Y$ is linearly proper in the sense of \cref{defn:polynomially.proper} and if $\mathbbl{\Lambda}$ is weakly equi-uniformly discrete in $X$, then the family $F|_{\Lambda_s} \colon \Lambda_s \to Y$ is equi-linearly proper.

By \cref{lem:equi.polynomial.growth.inherit}, we obtain the following implication of equi-polynomial growth property. 
\begin{lem}\label{lem:equi.polynomial.growth.inherit.2}
    Let $X$ be a  proper metric space with equi-polynomial growth in the sense of \cref{defn:equi.coarse} (1). Let $\mathbbl{\Lambda}$ be weakly equi-uniformly discrete in $X$. Then, $\mathbbl{\Lambda}$ has equi-polynomial growth in the sense of \cref{defn:equi.coarse} (3).
\end{lem}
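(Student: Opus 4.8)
\textbf{Proof plan for \cref{lem:equi.polynomial.growth.inherit.2}.}
The plan is to run the argument of \cref{lem:equi.polynomial.growth.inherit} uniformly in the scale parameter $s$. Fix the constants provided by the hypotheses: by \cref{defn:equi.coarse} (1) there are $R>0$ and a polynomial $Q$ such that, for every $s\in[1,\infty)$, the rescaled space $(X,\rmd_s)$ admits an $(s^{-1}R,Q(s\cdot\blank))$-quasilattice $L_s$; by \cref{defn:equi.coarse} (3), after possibly enlarging $R$, we may assume that the given family $\mathbbl{\Lambda}$ is $(s^{-1}R,N)$-weakly uniformly discrete in $X$ with this same $R$ and some fixed $N\in\bN$. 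The claim to prove is that $\mathbbl{\Lambda}$ has equi-polynomial growth, i.e.\ there exist $\kappa_{\mathbbl{\Lambda}}>0$ and $l_{\mathbbl{\Lambda}}>0$ with $\#B_r^{\Lambda_s}(\bm{x})\le \kappa_{\mathbbl{\Lambda}}(1+s^{-1}r)^{l_{\mathbbl{\Lambda}}}$ for all $s\in[1,\infty)$ and $\bm{x}\in\Lambda_s$, where the ball is taken with respect to the pseudo-metric $\rmd^{\Lambda_s}(\bm{x},\bm{y})=\rmd(\iota_s(\bm{x}),\iota_s(\bm{y}))$.

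The key steps, in order. First, for each $s$ choose a ``nearest point'' map $\pi_s\colon\Lambda_s\to L_s$ so that $\rmd(\iota_s(\bm{x}),\pi_s(\bm{x}))=\mathrm{dist}(\iota_s(\bm{x}),L_s)$; this is possible because $L_s$ is closed (being relatively dense in the proper space $X$, one may pick a minimizer in a compact ball). Second, since $L_s$ is $s^{-1}R$-relatively dense in $(X,\rmd_s)$ — equivalently $R$-relatively dense in $(X,s^{-1}\rmd_s)=(X,\rmd)$ — we get $\rmd(\iota_s(\bm{x}),\pi_s(\bm{x}))\le s^{-1}R$, hence $\pi_s^{-1}(\bm{y})\subset \iota_s^{-1}(B_{s^{-1}R}(\bm{y}))$ for every $\bm{y}\in L_s$; the $(s^{-1}R,N)$-weak uniform discreteness of $\iota_s$ then gives $\#\pi_s^{-1}(\bm{y})\le N$ uniformly in $s$ and $\bm{y}$. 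Third, estimate $\#B_r^{\Lambda_s}(\bm{x})$: if $\rmd^{\Lambda_s}(\bm{x},\bm{y})<r$ then $\rmd(\iota_s(\bm{x}),\iota_s(\bm{y}))<r$, and by the triangle inequality $\rmd(\iota_s(\bm{x}),\pi_s(\bm{y}))<r+s^{-1}R$, so $\pi_s(B_r^{\Lambda_s}(\bm{x}))$ is contained in the ball of radius $r+s^{-1}R$ around $\pi_s(\bm{x})$ in $(X,\rmd)$, i.e.\ the ball of radius $s(r+s^{-1}R)=sr+R$ in $(X,\rmd_s)$; this ball meets $L_s$ in at most $Q(sr+R)$ points by the quasilattice bound. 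Combining with the fibre bound,
\begin{align*}
    \#B_r^{\Lambda_s}(\bm{x}) \le N\cdot \#\big(L_s\cap B_{sr+R}^{(X,\rmd_s)}(\pi_s(\bm{x}))\big) \le N\cdot Q(sr+R).
\end{align*}
Fourth, absorb the affine reparametrization $sr+R\le (1+R)(1+sr)$ into the polynomial: writing $Q(t)=\kappa\sum_{j\le l}t^j\le \kappa'(1+t)^l$ for suitable $\kappa',l$, we get $N\cdot Q(sr+R)\le N\kappa'(1+R)^l(1+sr)^l\le \kappa_{\mathbbl{\Lambda}}(1+s^{-1}r)^{l}$ after noting $sr\le s^2\cdot s^{-1}r$ — wait, here I should be careful: the desired bound is in terms of $s^{-1}r$, not $sr$. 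I would instead note directly that the natural growth statement one obtains is in the \emph{rescaled} metric $\rmd_s$, and that \cref{defn:equi.coarse} (3)/(4) is phrased precisely so that $Q_{\mathbbl{\Lambda}}(s^{-1}r)$ controls the $\rmd^{\Lambda_s}$-balls; re-reading the definitions, $\#B_r^{\Lambda_s}(\bm{x})$ there already refers to balls in the (unrescaled) pseudo-metric and the bound required is $Q_{\mathbbl{\Lambda}}(s^{-1}r)$, so one should track that the quasilattice count should be taken in $(X,\rmd)$ with radius $r+s^{-1}R$, giving at most $Q\big(s(r+s^{-1}R)\big)$ only if $Q$ refers to the $\rmd_s$-metric — so the clean route is to apply the $(s^{-1}R,Q(s\cdot\blank))$-quasilattice property literally: $\#(L_s\cap B_\rho^{(X,\rmd)}(\bm{y}))\le Q(s\rho)$, with $\rho=r+s^{-1}R$, yielding $Q(sr+R)$, and then reconcile with $Q_{\mathbbl{\Lambda}}(s^{-1}r)$ only after realizing the statement is invariant under rescaling $X$ by $s$; the bookkeeping of which metric carries the scale is exactly the subtle point.

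The main obstacle is thus not any single estimate but keeping the scale conventions straight: there are three metrics in play ($\rmd$, $\rmd_s$, and the pseudo-metric pulled back by $\iota_s$), and the definitions of equi-bounded geometry, weak equi-uniform discreteness, and equi-polynomial growth each normalize the scale differently (some by $s^{-1}$ on the radius, some by $s$ on the argument of $Q$). The correct strategy is to reduce everything to a single statement about the box space $\Box X$: by the discussion following \cref{defn:equi.coarse}, $X$ has equi-polynomial growth iff $\Box X=\bigsqcup_s(X,\rmd_s)$ has polynomial growth (with quasilattice $\bigsqcup_s L_s$), and $\mathbbl{\Lambda}$ is weakly equi-uniformly discrete iff $\bigsqcup_s\iota_s\colon\bigsqcup_s\Lambda_s\to\Box X$ is weakly uniformly discrete with a \emph{single} constant $R$; then \cref{lem:equi.polynomial.growth.inherit} applied verbatim to $\Box X$ and this map gives that $\bigsqcup_s\Lambda_s$ has polynomial growth in $\Box X$, which unwinds to the equi-polynomial growth of $\mathbbl{\Lambda}$ in the sense of \cref{defn:equi.coarse} (3). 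This packaging sidesteps the per-scale bookkeeping entirely and is the cleanest presentation; the per-scale computation above is then just the unwinding of that box-space statement.
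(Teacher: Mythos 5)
Your final approach---reducing to the box space $\Box X$ and applying \cref{lem:equi.polynomial.growth.inherit} verbatim---is exactly the paper's intended argument, which it leaves implicit (``By \cref{lem:equi.polynomial.growth.inherit}, we obtain the following \ldots'') after setting up the $\Box X$ dictionary in the paragraph following \cref{defn:equi.coarse}. One small slip in your preliminary paragraph: the phrase ``after possibly enlarging $R$, we may assume $\mathbbl{\Lambda}$ is $(s^{-1}R,N)$-weakly uniformly discrete'' runs the WLOG in the wrong direction, since enlarging the ball radius in the weak-uniform-discreteness bound is not free (larger balls may contain more preimages), whereas enlarging the quasilattice's relative-density radius is; the clean WLOG is to bump the quasilattice constant up to the discreteness constant---but since you explicitly abandon the per-scale bookkeeping and the final box-space argument does not depend on that remark, the proof stands.
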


\begin{exmp}\label{exmp:lattice.Rl}
    If $X$ is a closed subspace of $\bR^{l_X}$, then its $(s,Q_{\mathbbl{\Lambda}}(s)) $-quasilattice $\mathbbl{\Lambda}=\{ \Lambda_s\}_{s \in [1,\infty)}$ is obtained by sampling a point from each non-empty intersection $X \cap \prod_{i=1}^{l_X} [cn_i,cn_i+c]$, where $c \coloneqq s/\sqrt{l_X}$. 
    Its growth rate function is taken to be $Q_{\mathbbl{\Lambda}}(sr) \leq \kappa (1+sr)^{l_X}$ for some $\kappa >0$. That is, $X$ has equi-bounded geometry and equi-polynomial growth. 

    Moreover, since each $\Lambda_s$ is identified with a subset of the standard lattice, it has a brick $(\bB_s, B_s)$ in the sense of \cref{defn:brick}, whose growth rate introduced in \cref{lem:R(rho.x).growth} is taken to be $\kappa_{\bB}(1+sr)^{2l_{\mathbbl{\Lambda}}}$ uniformly on $s \in [1,\infty) $. 
    For this reason, we restrict the class of metric spaces $X$ that we consider in this paper to the closed subspaces of a Euclidean space. 
\end{exmp}

\begin{defn}\label{defn:MCW}     
    Let $\sfE$ denote the category whose objects are proper metric spaces $(X,\rmd)$ that admits a bi-Lipschitz embedding to some Euclidean space, and whose morphisms are linearly proper large-scale Lipschitz maps. 
    Let $\ECW$ denote its subcategory whose objects are pairs $(X,\rmd)$, where $X$ is a CW-complex equipped with a proper metric $\rmd$ such that 
    \begin{enumerate}
        \item there is $\cR_X>0$ such that $D \cap N_{\cR_X}(\partial D) \subsetneq D$ for any cell $D$ of $X$, and
        \item there is a bi-Lipschitz homeomorphism of $X$ and a closed subspace of $\bR^l$ for some $l>0$,
    \end{enumerate}
    and whose morphisms are linearly proper Lipschitz continuous maps. We write $\ECW_{\fin} \subset \ECW$ for the full subcategory consisting of finite CW-complexes. 
\end{defn}

\begin{rmk}\label{rmk:embed.CWfin.EucCW}
    A finite CW complex $X \in \CW_{\fin}$ can be embedded into a Euclidean space $\bR^n$ by inductively embedding each newly attached cell into additional dimensions (see e.g.\ \cite{hatcherAlgebraicTopology2002}*{Corollary A.10}). This embedding induces a metric on $X$ from that of $\bR^n$. 
    More strongly, this construction shows that $X$ is a Euclidean neighborhood retract (ENR), i.e., the above embedding  $\iota_X \colon X \to \bR^n$ satisfies that $\iota_X(X) \subset \bR^n$ has a deformation retract neighborhood $(U_X,r_X)$. Moreover, the deformation retract $r_X \colon U_X \times [0,1] \to U_X$ can be taken to be Lipschitz continuous. 
    This shows that, for $X,Y \in \CW_{\fin}$ equipped with an embedding as above, any continuous map $F \colon X \to Y$ is approximated by a Lipschitz continuous map $r_{Y,1} \circ F' \colon X \to U_Y \to Y$, which shows that the Lipschitz homotopy set $[X,Y]_{\mathrm{Lip}}$ is naturally isomorphic to the ordinary homotopy sets $[X,Y]$. 
    (The same fact also follows from \cite{mitsuishiGoodCoveringsAlexandrov2019}*{Corollary 1.3}.)
    This shows that there is a full subcategory $\CW_{\fin}^{\iota} \subset \ECW_{\fin}$ such that the forgetful functor $\CW_{\fin}^{\iota} \to \CW_{\fin}$ is essentially surjective and induces isomorphisms of homotopy sets of morphisms.
\end{rmk}

\begin{defn}\label{defn:cone.suspension}
    Let $X \in \ECW$. We call $\mathbf{C}X \coloneqq X \times \bR_{\leq 0}$ and $\boldsymbol{\Sigma} X \coloneqq X \times \bR$, equipped with the canonical CW-complex structure and the $\ell^2$-product metric, the \emph{cone} and the \emph{suspension} of the category $\ECW$ respectively. 
\end{defn}

For a linear function $\varphi (r)=m(1+r)$, the subspace 
\[
    \mathbf{I}_\varphi X \coloneqq \{ (\bm{x},y) \in X \times \bR \mid 0 \leq y \leq \varphi(\mathrm{d}(\bm{x},\bm{x}_0)) \}
\]
is also contained in the class $\ECW$ (cf.\  \eqref{eqn:coarse.homotopy}). 
For Lipschitz continuous maps $F_0, F_1 \colon X \to Y$, their \emph{Lipschitz continuous coarse homotopy} is given by a Lipschitz continuous map $\widetilde{F} \colon \mathbf{I}_{\varphi} X \to Y$ for some $\varphi$ such that $\widetilde{F} \circ \iota_i = F_i$. 

\begin{defn}[\cite{roeIndexTheoryCoarse1996}*{Definitions 9.3, 9.10}]\label{defn:coarse.flasque.scaleable}
    Let $X \in  \ECW$. 
    \begin{enumerate}
        \item We say that $X$ is \emph{topologically flasque} if there is a linearly proper Lipschitz continuous map $T_X \colon X \times [0,1] \to X$ such that $T_{X,0}=\id$, $T_{X,1}$ is $1$-Lipschitz, and for any bounded subset $B \subset X$ there is $n \in \bZ_{\geq 1}$ such that $\Im (T_{X,1}^{n}) \cap B = \emptyset$. 
        \item We say that $X$ is \emph{scaleable} if there is a linearly proper Lipschitz continuous map $S_X \colon X \to X $ such that $\rmd(S_{X}(\bm{x}),S_{X}(\bm{y})) \leq \rmd(\bm{x},\bm{y})/2$ for any $\bm{x},\bm{y} \in X$ and there is a coarse homotopy $\widetilde{S}_X \colon \mathbf{I}_\varphi X \to X$ of $\id_X$ and $S_X$ given by a linearly proper Lipschitz continuous map. 
    \end{enumerate}
\end{defn}
Note that our definition of scaleability is slightly different from the original one by Roe. 
The cone $\mathbf{C} X$ in \cref{defn:cone.suspension} is a motivating example of flasque metric spaces by $T_{X,t}(\bm{x},v)=(\bm{x},v+t)$. 
For a closed subspace $X \subset S^n$, its Euclidean cone $\cO X = \{ \bm{x} \in \bR^{n+1} \mid \bm{x}/\| \bm{x}\| \in X \}$ is scaleable by $S_{X} (\bm{x}) = \bm{x}/2$ (cf.\ \cref{exmp:Lipschitz.homotopy}). 
In particular, $\bR^d \times \bR_{\geq 0}$ is topologically flasque and $\bR^d $ is scaleable.

\subsection{The sheaf of IG localization flows of UAL Hamiltonians}
We first introduce the sheaves of smooth families of gapped and IG UAL Hamiltonians on a proper metric space $X$ in the class $\sfE$. This part is a straightforward generalization of \cref{subsection:sheaf.lattice}. 
\begin{defn}\label{defn:gapped.Hamiltonian.X}
Let $X \in \sfE$. 
\begin{enumerate}
    \item Let $\fL_X$ denote the set of subsets $\Lambda \subset X \times \bR^\infty \times \fR \times \bN$ satisfying the following conditions (cf.\ \cref{defn:lattice.set}):
    \begin{enumerate}
    \item[(i)] The set $\Lambda$ is contained in $X \times \bR^{l_\Lambda} \times \fR \times [0,n_\Lambda]$ for some $l_{\Lambda} \in \bN$ and $n_{\Lambda} \in \bN$. 
    \item[(ii)] The projection $\pr_{X \times \bR^{l_{\Lambda}}} \colon \Lambda \to X \times \bR^{l_{\Lambda}}$ is weakly uniformly discrete in the sense of \cref{subsection:coarse.lattice}, which imposes the pseudo-metric $\rmd$ on $\Lambda$.
    \item[(iii)] The map $\pr_{X} \colon \Lambda \to X$ is linearly proper in the sense of \cref{defn:polynomially.proper}. 
\end{enumerate}
    \item The sheaf $\sGP(X) = \sGP(X \midbar \blank)$ is defined by 
\begin{align*} 
    \sGP(X \midbar \blank) \coloneqq \colim_{\Lambda \in \fL_X} \sGP(\Lambda \midbar \blank). 
\end{align*}
    \item The composite operation $\blank \boxtimes \blank $ of lattices and gapped UAL Hamiltonians are defined in the same way as \eqref{eqn:composition.lattice} and \eqref{eqn:composition.system} respectively. 
    By the same argument as \cref{lem:H-monoid}, it follows that this $\boxtimes$ makes $\sGP(X)$ a local commutative H-monoid with a weakly strict unit in the sense of \cref{defn:local.commutative.Hmonoid}. 
    We define the sheaves $\sIP(X) \coloneqq \sGP(X)^\times $, and its refinement $\bsIP (X)$ in the sense of \cref{lem:bold.invertible.sheaf} (cf.\ \cref{eqn:IP.bold}). 
\end{enumerate}
\end{defn}

In the same way as \eqref{eqn:push.quantum.system}, we define the push-forward of quantum spin systems with respect to a linearly proper large-scale Lipschitz map. 
\begin{lem}\label{lem:MCW.induced.hom}
    Let $X,Y \in \sfE$ and let $F \colon X \to Y$ be a linearly proper large-scale Lipschitz map (that is not necessarily continuous). By using a fixed choice of a bi-Lipschitz embedding $\iota_X \colon X \to \bR^{l_X}$, let $\widetilde{F} \coloneqq F \times \iota_X \colon X \to Y \times \bR^{l_X}$. Then,  
    \begin{align*}
    F_*  \colon \sGP(X \midbar \sM) = \colim_{\Lambda \in \fL_X} \sGP(\Lambda \midbar \sM) \xrightarrow{\colim (\widetilde{F}|_{\Lambda})_*} \colim_{\Lambda \in \fL_X} \sGP(\widetilde{F}(\Lambda) \midbar \sM) \subset \sGP(Y \midbar \sM)
    \end{align*}
    gives a morphism of sheaves. Moreover, if two linearly proper large-scale Lipschitz maps $F_0$ and $F_1$ are near, then $F_{0,*}$ and $F_{1,*}$ are smoothly homotopic. 
\end{lem}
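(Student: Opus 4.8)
The plan is to reduce everything to the lattice level, where the required statements are already available (\cref{lem:induced.map.GP} and \cref{lem:flip.homotopy.use}), and then to assemble the resulting morphisms and homotopies compatibly along the colimit defining $\sGP(X \midbar \blank)$.

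First I would check that, for each $\Lambda \in \fL_X$, the restricted map $\widetilde{F}|_\Lambda \coloneqq (\widetilde{F} \times \id_{\bR^{l_\Lambda}} \times \id_{\fR} \times \id_{\bN})|_\Lambda$ is an injective, linearly proper, large-scale Lipschitz map over $\fR$, and that $\widetilde{F}(\Lambda) \in \fL_Y$. Injectivity and the large-scale Lipschitz property are immediate since $\iota_X$ is bi-Lipschitz and $F$ is large-scale Lipschitz; for $\widetilde{F}(\Lambda) \in \fL_Y$ one verifies conditions (i)--(iii) of \cref{defn:gapped.Hamiltonian.X}: (i) is clear; (ii) holds because $\pr_{Y \times \bR^{l_X + l_\Lambda}} \circ \widetilde{F}|_\Lambda$ factors through the bi-Lipschitz map $\iota_X \times \id$, so preimages of balls pull back to bounded subsets of $\Lambda$, which are finite of uniformly bounded cardinality by weak uniform discreteness of $\Lambda$; and (iii) holds because $\pr_Y \circ \widetilde{F}|_\Lambda = F \circ (\pr_X|_\Lambda)$ is a composite of linearly proper maps. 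Then \cref{lem:induced.map.GP} — whose proof applies verbatim once $\Lambda$ and $\widetilde{F}(\Lambda)$ are quasi-isometrically embedded in Euclidean spaces via $\iota_X$ and $\iota_Y$, so that the polynomial-growth and brick estimates of \cref{section:analysis.spin} remain valid — produces the morphism $(\widetilde{F}|_\Lambda)_* \colon \sGP(\Lambda \midbar \blank) \to \sGP(\widetilde{F}(\Lambda) \midbar \blank)$, which satisfies $(\widetilde{F}|_\Lambda)_*(\sfh) = \sfh$.

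Next I would check compatibility with the colimit. For $\Lambda_1 \leq \Lambda_2$ in $\fL_X$ we have $\widetilde{F}(\Lambda_1) \subseteq \widetilde{F}(\Lambda_2)$ by injectivity, i.e.\ $\widetilde{F}(\Lambda_1) \leq \widetilde{F}(\Lambda_2)$ in $\fL_Y$, and the square relating the $(\widetilde{F}|_{\Lambda_i})_*$ to the two structural inclusions commutes because those inclusions insert trivial Hamiltonians and $\widetilde{F}_*$ sends $\sfh$ to $\sfh$. Hence $\Lambda \mapsto \widetilde{F}(\Lambda)$ is a functor $\fL_X \to \fL_Y$ and $F_* = \colim_{\Lambda} (\widetilde{F}|_\Lambda)_*$ is a well-defined morphism $\sGP(X \midbar \blank) \to \colim_{\Lambda} \sGP(\widetilde{F}(\Lambda) \midbar \blank) \to \sGP(Y \midbar \blank)$. (Since $\widetilde{F}$ intertwines $\boxtimes$ up to the reindexing covered by \cref{rmk:Ainfty}, $F_*$ is moreover a morphism of local commutative H-monoids, and thus restricts to $\sIP(X)$ and $\bsIP(X)$.)

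For the homotopy statement, suppose $F_0$ and $F_1$ are near; then so are $\widetilde{F}_0$ and $\widetilde{F}_1$, and for each $\Lambda \in \fL_X$ the union $\Lambda'_\Lambda \coloneqq \widetilde{F}_0(\Lambda) \cup \widetilde{F}_1(\Lambda)$ lies in $\fL_Y$ — it is weakly uniformly discrete as a union of two such sets, and $\pr_Y$ is linearly proper on it because $F_0$ and $F_1$ are near. Regarding $\widetilde{F}_0|_\Lambda$ and $\widetilde{F}_1|_\Lambda$ as injective, near, large-scale Lipschitz maps $\Lambda \to \Lambda'_\Lambda$ over $\fR$, \cref{lem:flip.homotopy.use} yields a smooth homotopy between $(\widetilde{F}_0|_\Lambda)_*$ and $(\widetilde{F}_1|_\Lambda)_*$ as morphisms into $\sGP(Y \midbar \blank)$; since the flip construction fixes $\sfh$, these homotopies are compatible with the inclusions $\Lambda_1 \leq \Lambda_2$ and assemble to a smooth homotopy $F_{0,*} \simeq F_{1,*}$ on the colimit. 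The main obstacle is precisely this last coherence bookkeeping: the auxiliary lattices and flip homotopies produced by \cref{lem:induced.map.GP} and \cref{lem:flip.homotopy.use} must be chosen consistently across all $\Lambda$ so that they descend to the colimit — although only the $1$-categorical level (a morphism together with one homotopy) is needed here, not the higher $A_\infty$ data of \cref{rmk:Ainfty}.
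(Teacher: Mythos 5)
Your proposal is correct and follows essentially the same route as the paper: verify that $\widetilde{F}(\Lambda) \in \fL_Y$ by checking conditions (i)--(iii) of \cref{defn:gapped.Hamiltonian.X} (with (ii) reduced to the bi-Lipschitz embedding $\iota_X$ and (iii) to $\pr_Y \circ \widetilde{F} = F \circ \pr_X$), then apply \cref{lem:induced.map.GP} for the morphism and \cref{lem:flip.homotopy.use} for the homotopy, assembling along the colimit using that $\sfh \mapsto \sfh$. The coherence concern you flag at the end is real but is the same bookkeeping the paper already handles globally via the local H-monoid formalism of \cref{rmk:Ainfty}, so nothing new is needed here.
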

\begin{proof}
    It is proved in the same way as \cref{lem:induced.map.GP,lem:flip.homotopy.use}. It suffices to verify that any $\Lambda \in \fL_X$ is sent to $\widetilde{F}(\Lambda) \in \fL_Y$. The condition (i) is clear from the definition. The condition (ii) is valid since the composition of projections $\widetilde{F}(\Lambda) \to Y \times \bR^{l_X + l_{\Lambda}} \to \bR^{l_X +l_{\Lambda}}$, which coincides with $\pr_{X \times \bR^{l_{\Lambda}}}$, is weakly uniformly discrete. 
    Finally, the condition (iii) follows from the linear properness of $\pr_Y \circ \widetilde{F} = F \circ \pr_{X}$.
\end{proof}
\begin{lem}
    If $F_0, F_1 \colon X \to Y$ are connected by a linearly proper large-scale Lipschitz homotopy, then the induced morphisms $F_{0,*},F_{1,*} \colon \sIP(X) \to \sIP(Y)$ are smoothly homotopic.  
    In particular, for linearly proper large-scale Lipschitz maps $F \colon X \to Y$ and $G \colon Y \to Z$, the morphisms $(G \circ F)_*$ and $G_* \circ F_*$ are smoothly homotopic. That is, the assignment $\sIP(\blank ) \colon \sfE \to \Sh(\Man)$ is functorial up to homotopy. 
\end{lem}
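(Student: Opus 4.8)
The plan is to reduce the homotopy invariance of $F_*$ under linearly proper large-scale Lipschitz homotopy to the already-established large-scale Lipschitz homotopy invariance on lattices, namely \cref{cor:coarse.homotopy.Hamiltonian}. First I would fix $X, Y \in \sfE$ together with bi-Lipschitz embeddings $\iota_X \colon X \to \bR^{l_X}$ and $\iota_Y \colon Y \to \bR^{l_Y}$, and let $\widetilde{F} \colon \mathbf{I}_\varphi X \to Y$ be a linearly proper large-scale Lipschitz homotopy connecting $F_0$ and $F_1$. The morphism $F_{i,*}$ is a colimit over $\Lambda \in \fL_X$ of push-forwards $(\widetilde{F}_i|_\Lambda)_*$, where $\widetilde{F}_i \coloneqq F_i \times \iota_X$, so it suffices to produce, for each $\Lambda \in \fL_X$ (and compatibly with the inductive system), a smooth homotopy in $\sGP(Y \midbar \blank)$ between $(\widetilde{F}_0|_\Lambda)_*$ and $(\widetilde{F}_1|_\Lambda)_*$ restricted to $\sGP(\Lambda \midbar \blank)$.

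The key step is the following: given $\sfH \in \sGP(\Lambda \midbar \sM)$, form the enlarged lattice obtained by attaching to $\Lambda$ the "collar" $\mathbf{I}_{\psi}\Lambda$ for a suitable linear $\psi$ dominating the homotopy $\widetilde{F}$'s vertical extent, exactly as in the proof of \cref{cor:coarse.homotopy.Hamiltonian}, but now pushing forward along $\widetilde{F} \times \iota_X$ (composed with $\pr_X$-to-$\bR^{l_X}$ data) rather than an ambient-Euclidean homotopy. Concretely, I would first apply \cref{prp:coarse.homotopy} to $\sfH$ to replace it, up to smooth homotopy supported on $\mathbf{I}_{2\varphi + r}\Lambda$, by a Hamiltonian $\sfH'$ supported on the annular region $\mathbf{I}_{2\varphi + r}\Lambda \setminus \mathbf{I}_\varphi \Lambda$; then the composite homotopy $\widetilde{F}$ restricted to this annular region, which factors through the boundary $\partial_0$, agrees with the straight cylinder over $F_1$, so that $\widetilde{F}_*(\sfH') = (F_1)_*(\sfH')$ after an identification of lattices; finally, pull back along the scaling/collapsing map to compare $(F_0)_*(\sfH) \simeq \widetilde{F}_*(\sfH \boxtimes \sfh) \simeq \widetilde{F}_*(\sfH') = (F_1')_*(\sfH') \simeq (F_1')_*(\sfH \boxtimes \sfh) = (F_1)_*(\sfH)$, using \cref{lem:flip.homotopy.use} (or its $\sfE$-version in \cref{lem:MCW.induced.hom}) for the near-maps at the two ends. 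The linear properness of $\widetilde{F}$, together with $X \in \sfE$ having equi-polynomial growth by \cref{lem:equi.polynomial.growth.inherit.2} and \cref{exmp:lattice.Rl}, guarantees that all the intermediate lattices lie in $\fL_Y$, so the analytic estimates of \cref{section:analysis.spin} apply. Passing to the colimit over $\Lambda$ and checking compatibility with the maps $\sfH \mapsto \sfH \boxtimes \sfh_{\Lambda_2 \setminus \Lambda_1}$ yields the desired smooth homotopy of morphisms $F_{0,*} \simeq F_{1,*}$.

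For the functoriality clause, given linearly proper large-scale Lipschitz maps $F \colon X \to Y$ and $G \colon Y \to Z$, the two composites $(G \circ F)_*$ and $G_* \circ F_*$ are built from push-forwards along $\widetilde{G \circ F} = (G \circ F) \times \iota_X$ and $\widetilde{G} \circ \widetilde{F}$ respectively; these differ only in the auxiliary $\bR^{l_X}$ versus $\bR^{l_Y}$ bookkeeping coordinates, and the two lifts of $G \circ F$ to $Z \times \bR^\bullet$ are near in the sense of $\pr_Z$-preserving maps (after composing with a linear projection), hence induce near maps of lattices in the sense of \cref{lem:MCW.induced.hom}. Therefore $(G \circ F)_* \simeq G_* \circ F_*$, and the assignment $X \mapsto \sIP(X)$, $F \mapsto F_*$ descends to a functor $\sfE \to \mathrm{Ho}(\Sh(\Man))$.

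The main obstacle I anticipate is the bookkeeping needed to make \cref{prp:coarse.homotopy} apply uniformly over the colimit: one must choose the collar width $r$ and the linear bound $\psi$ (which governs $\mathbf{I}_{2\varphi+r}\Lambda$) large enough for \emph{every} $\Lambda \in \fL_X$ simultaneously, or else argue that the homotopies assembled for each $\Lambda$ are coherent under the inductive maps. Since \cref{prp:coarse.homotopy} already produces, for a fixed $\bsfH$ supported on a fixed $\Lambda$, an $r = r(\bsfH)$, the colimit compatibility should be handled as in the proof of \cref{lem:H-monoid} and \cref{rmk:local.Hmonoid.structural.homotopy}: work chart by chart on $\sM$ with a locally finite cover, choose the supporting lattices increasingly along the cover, and glue. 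This is routine in spirit but requires care that the choices of $\widetilde{F}'$, the modified collar map, and the near-homotopies from \cref{lem:flip.homotopy.use} all vary continuously in the parameter and are mutually consistent; I would organize this exactly parallel to \cref{exmp:Lipschitz.homotopy} (2) and the injectivity argument in \cref{lem:weak.equivalence.stick}.
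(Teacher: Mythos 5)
Your argument follows the paper's proof: it reduces the claim to \cref{cor:coarse.homotopy.Hamiltonian} via \cref{prp:coarse.homotopy}, constructs the modified coarse homotopy $\widetilde{F}'$ and the cylinder $F_1'$ agreeing on the annular region, and concludes with the same chain of homotopies $(F_0)_*(\sfH)\simeq\widetilde{F}_*'(\sfH\boxtimes\sfh)\simeq\widetilde{F}_*'(\sfH')=(F_1')_*(\sfH')\simeq(F_1)_*(\sfH)$. The only discrepancies are cosmetic: the paper's actual remark is that the truncated Kitaev pump of \cref{lem:truncated.pump} is still well-defined on $\Lambda\times\bZ_{\geq 0}$ for $\Lambda\in\fL_X$ (your appeal to \cref{lem:equi.polynomial.growth.inherit.2}/\cref{exmp:lattice.Rl} is not quite the right citation — ordinary polynomial growth from the definition of $\fL_X$ is what's used, not the equi-polynomial notion from the localization section), and the "nearness" in the functoriality clause should be nearness of the $Z$-components only, invoking \cref{exmp:Lipschitz.homotopy}(2) rather than literal boundedness.
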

\begin{proof}
    This lemma is proved in the same way as \cref{cor:coarse.homotopy.Hamiltonian}. 
    We only remark is that, even for an IG UAL Hamiltonian on $\Lambda \in \fL_X$, the truncated Kitaev pump given in \cref{lem:truncated.pump} is well-defined as a smooth path of IG UAL Hamiltonians on $\Lambda \times \bZ_{\geq 0}$. 
\end{proof}
This lemma shows that if $F \colon X \to Y$ is a quasi-isometry, then $F_*$ induces a weak equivalence of the sheaves $\sGP(X )$ and $\sGP(Y)$.
In fact, more strongly, $\sGP(X )$ and $\sGP(Y)$ are isomorphic as sheaves. It is verified by using a bijection of lattices instead of $\widetilde{F}$.
In particular, for a quasilattice $L \subset X$, the sheaf $\sGP(X )$ is isomorphic to $\sGP(L)$.

Next, we give a refined notion of topological phases over a proper metric space $X$ in the class $\ECW$, which reflects the local topology of $X$. 
\begin{defn}\label{defn:localizing.path}
Let $X \in \ECW$. 
\begin{enumerate}
    \item Let $\fL_X^{\loc}$ denote the set of lower-hemicontinuous family $\mathbbl{\Lambda}\coloneqq \{ \Lambda_s \}_{s \in [1,\infty)}$ of discrete subspaces of $X \times \bR^\infty \times \fR \times \bN^2$ such that the following hold:
    \begin{enumerate}
    \item[(i)] There is $l_{\mathbbl{\Lambda}} \in \bN$ and $n_{\mathbbl{\Lambda}} \in \bN$ such that, for any $s \in [1,\infty)$, there is $m_{\Lambda_s}\in \bN$ such that $\Lambda_s$ is contained in $X \times \bR^{l_{\mathbbl{\Lambda}}} \times \fR \times [0,m_{\Lambda_s}] \times [0,n_{\mathbbl{\Lambda}}]$. 
    \item[(ii)] The family of projections $\pr_{X \times \bR^{l_{\Lambda}}} \colon \Lambda_s \to X \times \bR^{l_{\Lambda}}$ is weakly equi-uniformly discrete in the sense of \cref{defn:equi.coarse} (3).
    \item[(iii)] The family of projections $\pr_{X} \colon \Lambda_s \to X$ is equi-linearly proper in the sense of \cref{defn:equi.coarse} (5). 
\end{enumerate}
    \item A smooth family of \emph{gapped localization flows} of UAL Hamiltonians on $\mathbbl{\Lambda} \in \fL_X^{\loc}$ parametrized by $\sM \in \Man$ is a family $\sfH(p \, ;s) \in \fH_{\Lambda_s}^{\al}$ such that, on each relatively compact chart $\sU$ of $\sM$, the following localizatin $C^k$-norms $\vvert \sfH \vvert _{\loc , \sU, C^k, f}$ are finite for any $f \in \cF$;
        \begin{align}
        \begin{split}
            \vvert \sfH \vvert _{\loc , \sU, C^k, f} \coloneqq   {}&{} \sup_{s \in [1,\infty)} \vvert \sfH  \vvert_{(s), \sU,C^k,f} , \\
            \vvert \sfH  \vvert_{(s), \sU,C^k,f} \coloneqq {}&{} \sup_{l\leq k}\vvert \partial_s^l \sfH(\blank \,; s)\vvert_{\sU,C^{k-l},f} \quad \text{with respect to the metric $\rmd_s$} \\
            ={}&{}
            \sup_{p \in \sU} \sup_{\bm{x} \in \Lambda_s } \sup_{i +|I| \leq k} \sup_{r >0} \Big(  f(sr)^{-1} \cdot  \| \partial_s^{i} \partial^{I} \sfH_{\bm{x}}(p\,; s) - \Pi_{\bm{x},r}(\partial_s^{i} \partial^{I} \sfH_{\bm{x}}(p\,; s)) \| \Big).
        \end{split}\label{eqn:localizing.norm}
        \end{align}
    \item We write $\sGP_{\loc}(\mathbbl{\Lambda}) = \sGP_{\loc}(\mathbbl{\Lambda} \midbar \blank)$ for the sheaf of smooth families of gapped localization flows supported on $\mathbbl{\Lambda}$ and let
    \begin{align*}
    \sGP_{\loc}(X)=\sGP_{\loc}(X \midbar \blank) \coloneqq \colim_{\mathbbl{\Lambda} \in \fL_X^{\loc}} \sGP_{\loc}(\mathbbl{\Lambda} \midbar \blank). 
    \end{align*}
    By regarding it as a local commutative H-monoid via the product in \cref{defn:gapped.Hamiltonian.X} (3), we define the subsheaf $\sIP_\loc(X)$ of IG localization flows and its refinement $\bsIP_{\loc}(X)$ in the sense of \cref{lem:bold.invertible.sheaf}. 
    Its realization denotes $\IP_{\loc}(X) \coloneqq | \Sing \bsIP_{\loc}(X)|$.
    \item We also define the degree-shift sheaf $\sIP_{\loc,d}(X)$ and its realization $\IP_{\loc,d}(X)$ by
    \begin{align*}
        \sIP_{\loc,d}(X) \coloneqq  \sIP_{\loc}(\boldsymbol{\Sigma}^dX), \quad \IP_{\loc,d}(X) \coloneqq |\Sing \bsIP_{\loc,d}(X)|.
    \end{align*}
\end{enumerate}
\end{defn}
Here, the reason for reinterpreting the $\bN$ infinite internal degrees of freedom as $\bN^2$ in (1) becomes clear during the proof of \cref{thm:scaleable}.

    Unless $X$ is discrete, a localization flow $\{ \sfH(p\,; s)\}_{s \in [1,\infty)}$ can not be expected to be supported on the same lattice. 
    The typical situation is that, the support of $\sfH(p\,; s)$ is an increasing sequence of lattices $ \{ \Lambda_s \}_{s \in [1,\infty)}$ that becomes finer as $s \to \infty$.   

We use the typescript font $\mathtt{ev}_s \colon \sIP_{\loc}(X) \to \sIP(X)$ for the evaluation of the localization parameter.
\begin{rmk}\label{rmk:parameter.shift}
    For a smooth family $\sfH(p\,; s)$ of localization flows parametrized by $p \in \sM$, its parameter shift 
\begin{align*}
    (\sigma_u \sfH) (p\,;s) \coloneqq \sfH (p \,; s+u)
\end{align*}
is also a smooth family of localization flows. 
\end{rmk}

\begin{prp}\label{prp:localizing.induced.hom}
    Let $X, Y \in \ECW$ and let $F \colon X  \to Y$ be a linearly proper Lipschitz continuous map. Then the induced map $F_* \colon \sGP(X \midbar  \blank \times \bR_{\geq 1}) \to \sGP(Y \midbar \blank \times \bR_{\geq 1})$ defined in \cref{lem:MCW.induced.hom} restricts to the morphism of subsheaves of localization flows as
    \begin{align*}
    F_* \colon \sGP_{\loc}(X ) \to \sGP_{\loc}(Y), \quad F_* \colon \sIP_{\loc}(X ) \to \sIP_{\loc}(Y). 
    \end{align*}
    Moreover, if two linearly proper Lipschitz continuous maps $F_0, F_1 $ satisfies $\pr_{\bR^{d+\infty}} \circ F_0 = \pr_{\bR^{d+\infty}} \circ F_1$, then $F_{0,*}$ and $F_{1,*}$ are smoothly homotopic.  
\end{prp}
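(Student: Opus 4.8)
The plan is to reduce everything to facts already established for the non-localized setting in \cref{lem:MCW.induced.hom} and its companions, and then verify that the two extra structural constraints defining a localization flow---the equi-polynomial decay of the localization $C^k$-norms in \eqref{eqn:localizing.norm}, and the membership of the support in $\fL_X^{\loc}$ rather than $\fL_X$---are preserved under push-forward by a linearly proper Lipschitz continuous $F$. First I would fix, once and for all, a bi-Lipschitz embedding $\iota_X \colon X \hookrightarrow \bR^{l_X}$ and set $\widetilde{F} = F \times \iota_X \colon X \to Y \times \bR^{l_X}$ as in \cref{lem:MCW.induced.hom}; this already gives the morphism $F_* \colon \sGP(X \midbar \blank \times \bR_{\geq 1}) \to \sGP(Y \midbar \blank \times \bR_{\geq 1})$, so the only new content is the restriction to the localization subsheaves. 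Because $\sGP_\loc(X) = \colim_{\mathbbl{\Lambda}} \sGP_\loc(\mathbbl{\Lambda} \midbar \blank)$, it suffices to work with a single family $\mathbbl{\Lambda} = \{\Lambda_s\}_{s\in[1,\infty)} \in \fL_X^{\loc}$ and a single smooth family $\sfH(p\,;s) \in \fH_{\Lambda_s}^{\al}$.

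The first step is to check $\widetilde{F}(\mathbbl{\Lambda}) \coloneqq \{\widetilde{F}(\Lambda_s)\}_{s} \in \fL_Y^{\loc}$. Condition (i) of \cref{defn:localizing.path} is immediate (the new internal Euclidean dimension is $l_{\mathbbl{\Lambda}} + l_X$ and the $\bN^2$-parameter is untouched). For (ii), note that the composite of projections $\widetilde{F}(\Lambda_s) \to Y \times \bR^{l_X + l_{\mathbbl{\Lambda}}} \to \bR^{l_X + l_{\mathbbl{\Lambda}}}$ is exactly the original $\pr_{X\times\bR^{l_{\mathbbl{\Lambda}}}}$ composed with $\iota_X \times \id$, hence weakly equi-uniformly discrete in the sense of \cref{defn:equi.coarse} (3) (with a rescaled constant $r$ coming from the bi-Lipschitz constant of $\iota_X$); this is the exact localization analogue of the argument in the proof of \cref{lem:MCW.induced.hom}, invoking \cref{lem:equi.polynomial.growth.inherit.2} if one wants equi-polynomial growth. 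For (iii), equi-linear properness of $\pr_Y \circ \widetilde{F} = F \circ \pr_X$ follows from the equi-linear properness of $\pr_X \colon \Lambda_s \to X$ composed with the linear properness of the Lipschitz map $F$, exactly as in the remark after \cref{defn:equi.coarse} that linearly proper $\circ$ equi-linearly proper is again equi-linearly proper.

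The second step is to verify that $F_*$ preserves the finiteness of the localization $C^k$-seminorms \eqref{eqn:localizing.norm}. Here I would reuse the estimate \eqref{eqn:f-norm.push} from the proof of \cref{lem:induced.map.GP}, but applied with the rescaled metric $\rmd_s$ on both sides: since $F$ is genuinely Lipschitz continuous (not merely large-scale Lipschitz), the constants $\alpha, \beta$ bounding $\rmd(F(\bm x), F(\bm y)) \le \alpha\,\rmd(\bm x, \bm y) + \beta$ can be taken uniformly in $s$ when measured against $\rmd_s$ (indeed a Lipschitz map satisfies $\rmd_s(F\bm x, F\bm y) = s\,\rmd(F\bm x,F\bm y) \le \alpha s\,\rmd(\bm x,\bm y) = \alpha\,\rmd_s(\bm x,\bm y)$, with no additive term), so \eqref{eqn:f-norm.push} gives $\|F_*(a)\|_{F(\bm x), sr\text{-truncation}} \le 2 c_{0,\mu,\alpha} f_{\mu_1}(0)^{-1} \|a\|_{\bm x, \dots}$ with an $s$-independent constant. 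Since $F_*$ commutes with $\partial_s$ and $\partial^I$ (the target algebra's identification of local factors does not depend on $p$ or $s$), summing over $i + |I| \le k$ and taking the supremum over $s \in [1,\infty)$ and $p \in \sU$ shows $\vvert F_*\sfH \vvert_{\loc,\sU,C^k,f} \le C \cdot \vvert \sfH \vvert_{\loc,\sU,C^k,f'}$ for a suitable $f' \in \cF$, hence $F_*\sfH \in \sGP_\loc(\widetilde{F}(\mathbbl{\Lambda}))$. That $F_*$ carries $\sIP_\loc$ into $\sIP_\loc$ is then formal: it is a morphism of local commutative H-monoids (it commutes with $\boxtimes$ up to the usual identifications, as in \cref{lem:induced.map.GP}), so it preserves invertibility by \cref{lem:Hmonoid.invertible}. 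The main obstacle I anticipate is precisely this uniformity in $s$: one must be careful that the bi-Lipschitz embedding $\iota_X$ and the Lipschitz constant of $F$ produce bounds on the truncated seminorms that do not degrade as $s \to \infty$, which is the whole reason the hypothesis is ``Lipschitz continuous'' rather than ``large-scale Lipschitz'' as in \cref{lem:MCW.induced.hom}.

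For the final homotopy claim, suppose $F_0, F_1$ are linearly proper Lipschitz continuous with $\pr_{\bR^{d+\infty}} \circ F_0 = \pr_{\bR^{d+\infty}} \circ F_1$ (so they differ only in the $X$-direction, i.e.\ they are near when projected to the relevant metric factor and agree on the internal coordinates). I would apply the linear straight-line homotopy $\widetilde{F}(\bm x, n) = \frac{\varphi(\bm x) - n}{\varphi(\bm x)} F_0(\bm x) + \frac{n}{\varphi(\bm x)} F_1(\bm x)$ of \cref{exmp:Lipschitz.homotopy} (2), which is a linearly proper Lipschitz continuous coarse homotopy $\mathbf{I}_\varphi X \to Y$ for $\varphi(r) = m(1+r)$ with $m$ chosen so that $\rmd(F_0(\bm x), F_1(\bm x)) \le \varphi(\rmd(\bm x,\bm x_0))$; then the localization analogue of \cref{cor:coarse.homotopy.Hamiltonian}---using that the truncated Kitaev pump of \cref{lem:truncated.pump} is a smooth \emph{localization} flow of IG UAL Hamiltonians on $\Lambda_s \times \bZ_{\geq 0}$, a point already flagged in the lemma immediately before \cref{defn:localizing.path}---produces a smooth homotopy $F_{0,*}\sfH \simeq F_{1,*}\sfH$ in $\sIP_\loc(Y)$ natural enough to assemble over the colimit defining $\sIP_\loc(X)$. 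This last step is essentially a transcription of the $\fL_X$-level argument into the $\fL_X^{\loc}$-level language, with the only extra checking being that all the intermediate lattices $\mathbf{I}_{2\varphi+r}\Lambda_s \setminus \mathbf{I}_\varphi\Lambda_s$ remain equi-uniformly discrete and equi-linearly proper as families in $s$, which follows from \cref{lem:equi.polynomial.growth.inherit.2} and \cref{exmp:lattice.Rl} exactly as in the discrete case.
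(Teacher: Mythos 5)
Your verification that $\widetilde F(\mathbbl\Lambda) \in \fL_Y^{\loc}$ and that the localization $C^k$-seminorms \eqref{eqn:localizing.norm} are preserved is correct and matches the paper's argument point for point, including the key observation that Lipschitz continuity (as opposed to large-scale Lipschitz) kills the additive constant $\beta$ in \eqref{eqn:f-norm.push} and makes the bound $\vvert F_*\sfH \vvert_{(s),\sU,C^k,\mu} \leq 2c_{0,\mu,\alpha}\vvert\sfH\vvert_{(s),\sU,C^k,\mu_1}$ uniform in $s$.

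For the final homotopy claim, however, there is a genuine gap. You have read the hypothesis backwards: $\pr_{\bR^{d+\infty}}\circ F_0 = \pr_{\bR^{d+\infty}}\circ F_1$ means the two push-forward maps place the lattice points at \emph{exactly the same locations} in the Euclidean (physical-plus-internal) coordinates, differing at most in the discrete labeling component $\fR\times\bN^2$ --- not ``near in the metric factor and equal on the internal coordinates'' as you state. Given that, the straight-line coarse homotopy $\widetilde F(\bm x,n) = \tfrac{\varphi-n}{\varphi}F_0(\bm x) + \tfrac{n}{\varphi}F_1(\bm x)$ from \cref{exmp:Lipschitz.homotopy}~(2) degenerates to a constant homotopy in the coordinates where it is defined, and it cannot interpolate at all in the discrete $\bN^2$-slot direction, which is precisely where $F_0$ and $F_1$ actually differ. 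The Kitaev-pump machinery via \cref{cor:localizing.coarse.homotopy} is the wrong tool here; moreover that corollary is proved only after this proposition and depends on the localized Kitaev pump, so invoking it would also create an awkward forward reference.

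The paper instead reduces the claim to \cref{lem:flip.homotopy} by the argument of \cref{lem:flip.homotopy.use}: because $\widetilde F_0(\Lambda_s)$ and $\widetilde F_1(\Lambda_s)$ occupy the same physical locations, the two induced $\ast$-homomorphisms differ only by a local permutation of tensor factors of $\cA_\lambda$, and the homotopy $\widetilde\flip{}^s$ of strictly on-site automorphisms (which fixes $\sfh$ and contracts the flip to the identity) furnishes the required smooth interpolation. Strict locality of $\widetilde\flip{}^s$ is exactly what makes the argument survive the rescaled metrics $\rmd_s$ uniformly in $s$, which is why the hypothesis must demand \emph{equality} in the Euclidean coordinates rather than the ``nearness'' that suffices in \cref{lem:flip.homotopy.use} for the unlocalized $\sGP$.
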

\begin{proof}
    First, we observe that the image $\widetilde{F}(\mathbbl{\Lambda})$ of $\mathbbl{\Lambda} \in \fL_X^{\loc}$ is also contained in $\fL_{Y}^\loc$. 
    Indeed, the condition (i) of \cref{defn:localizing.path} (1) is clear. The condition (ii) is verified in the same way as \cref{lem:MCW.induced.hom}. 
    To see (iii), observe that the composition of an equi-linearly proper family $\pr_X \colon \Lambda_s \to X$ and a linearly proper map $F \colon X \to Y$ is again equi-linearly proper.

    The remaining task is to show $F_*\bsfH \in \sGP_{\loc}(Y \midbar \sM)$ for  $\sfH \in \sGP_{\loc}(X\midbar \sM)$. 
    This follows from \cref{lem:induced.map.GP}. 
    Indeed, since $F$ is Lipschitz continuous, the estimate \eqref{eqn:f-norm.push} holds with $\beta=0$. Hence we have $\vvert F_*\sfH \vvert_{(s),\sU,C^k,\mu} \leq 2c_{0,\mu,\alpha} \vvert \sfH \vvert_{(s),\sU,C^k,\mu_1}$ independent of $s \in \bR_{\geq 1}$. 
    
    The second claim follows from \cref{lem:flip.homotopy} in the same way as \cref{lem:flip.homotopy.use}. 
\end{proof}

In parallel to \cref{defn:localizing.path} (2), for $\mathbbl{\Lambda} \in \fL_X^{\loc}$, let 
\[
    C^\infty(\sM,\fLDer ^{\al}_{\mathbbl{\Lambda}}) \coloneqq  \{ (p,s) \mapsto \sfG(p\,; s) \in \fD_{\Lambda_s}^{\al} \mid \text{$\vvert \sfG \vvert_{\loc,\sU,C^k,f} <\infty$ for any $f \in \cF$, $k \in \bN$ and $\sU$ } \},
\]
where the almost local $C^k$-norms are defined in the same way as \eqref{eqn:localizing.norm}. The notion of differential forms taking value in $\fLDer ^{\al}_{\mathbbl{\Lambda}}$ is also defined similarly. 
\begin{lem}\label{lem:auto.localizing.path}
    Let $\mathbbl{\Lambda} \in \fL_X^{\loc}$. The following holds;
    \begin{enumerate}
        \item Let $\sfG \in \Omega^1(\sM,\fLDer^{\al}_{\mathbbl{\Lambda}})$ and $ \sfG'=\sfG'_t dt \in \Omega^1( \sM \times [0,1] , \fLDer^{\al}_{\mathbbl{\Lambda}})$. Then we have $\alpha_p(\sfG'\,; t)(\sfG(p)) \in \Omega^1(\sM, \fLDer_{\mathbbl{\Lambda}}^{\al})$. Similarly, for $\sfH \in \sGP_{\loc}(\mathbbl{\Lambda} \midbar \sM)$, we have $\alpha_p(\sfG'\,; t)(\sfH(p)) \in \sGP_{\loc}(\mathbbl{\Lambda} \midbar \sM)$. 
        \item If $\sfH \in \sGP_{\loc} (\mathbbl{\Lambda} \midbar  \sM)$, then the adiabatic connection $1$-form (\cref{defn:automorphic.connection}) defines $\sfG_{\sfH} \in \Omega^1(\sM, \fLDer ^{\al}_{\mathbbl{\Lambda}})$. 
    \end{enumerate}
\end{lem}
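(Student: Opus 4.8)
The plan is to reduce everything to the uniform estimates already established in \cref{section:analysis.spin}, applied at each scale $s \in [1,\infty)$ with the metric $\rmd_s$, and then to check that the constants appearing in those estimates do not blow up as $s \to \infty$. The point is that all the quasi-locality input for the metric space $(X, \rmd_s)$ — the $F$-function bound of \cref{prp:Ffunction}, the Lieb--Robinson bounds \eqref{eqn:Lieb.Robinson}--\eqref{eqn:Lieb.Robinson.truncate}, and the bounds in \cref{prp:Lieb.Robinson}, \cref{prp:Lieb.Robinson.Ck}, \cref{lem:adiabatic.well-defined} — depend on $\Lambda_s$ only through the polynomial-growth constants $\kappa_{\Lambda_s}, l_{\Lambda_s}$ and the brick growth constant $\kappa_{\bB,s}$. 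By the definition of $\fL_X^{\loc}$ (conditions (ii), (iii)) together with \cref{lem:equi.polynomial.growth.inherit.2} and \cref{exmp:lattice.Rl}, these constants can be chosen uniformly in $s$: there are $\kappa_{\mathbbl{\Lambda}}, l_{\mathbbl{\Lambda}}, \kappa_{\bB}$ such that $\# B_r^{\Lambda_s}(\bm{x}) \le \kappa_{\mathbbl{\Lambda}}(1+s^{-1}r)^{l_{\mathbbl{\Lambda}}}$ and the brick of $\Lambda_s$ (inherited from the standard lattice as in \cref{exmp:lattice.Rl}) has growth rate $\le \kappa_{\bB}(1+s^{-1}r)^{2 l_{\mathbbl{\Lambda}}}$, all with respect to $\rmd_s$. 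This is precisely the statement that $\Box X$ (and hence $\bigsqcup_s \Lambda_s$) has polynomial growth, so one can even phrase the whole argument as an application of the earlier results to the single lattice $\bigsqcup_{s\in[1,\infty)} \Lambda_s$ inside $\Box X$; I would take that viewpoint to avoid repeating estimates.

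For part (1): first I would observe that $\fLDer^{\al}_{\mathbbl{\Lambda}}$ and $\sGP_{\loc}(\mathbbl{\Lambda}\midbar \sM)$ are, by \eqref{eqn:localizing.norm}, exactly the subspaces of families over $\sM \times [1,\infty)_s$ on which $\sup_s \vvert \blank \vvert_{(s),\sU,C^k,f}$ is finite, where $\vvert \blank \vvert_{(s)}$ uses $\rmd_s$ on $\Lambda_s$. Fixing $s$, \cref{prp:Lieb.Robinson.Ck}(1) applied to $(\Lambda_s,\rmd_s)$ gives
\begin{align*}
\vvert \alpha_p(\sfG'\,;t)(\sfG(p)) \vvert_{(s),\sU,C^k,\nu,\mu} \le \Upsilon^{(k)}_{\sfG',\nu,\mu}(1) \cdot \vvert \sfG \vvert_{(s),\sU,C^k,\nu_k,\mu_{3k+1}},
\end{align*}
and the key remark is that the function $\Upsilon^{(k)}_{\sfG',\nu,\mu}(t)$ is a polynomial in $\vvert \sfG' \vvert_{\sU,C^k,\cdot,\cdot}$ with coefficients built from $\kappa_{\mathbbl{\Lambda}}, l_{\mathbbl{\Lambda}}$, the $C_\nu, L_{\nu,\mu}$ constants, and the functions $\Upsilon_{\sfG',\nu_l,\mu_{3l}}$ — none of which depend on $s$ once the growth constants are taken uniform. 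Since $\vvert \sfG'\vvert_{\sU,C^k,\cdot,\cdot} \le \vvert \sfG'\vvert_{\loc,\sU,C^k,\cdot,\cdot} < \infty$ by hypothesis, taking $\sup_{s\in[1,\infty)}$ of the displayed inequality yields $\alpha_p(\sfG'\,;t)(\sfG(p)) \in \Omega^1(\sM,\fLDer^{\al}_{\mathbbl{\Lambda}})$. The Hamiltonian statement is identical once one also transports the ground state: $\omega_{\alpha_p(\sfG'\,;t)(\sfH(p))} = \omega_{\sfH(p)} \circ \alpha_p(\sfG'\,;t)^{-1}$, and the differentiability bound \eqref{eqn:differentiable} for the transported state follows from \cref{lem:smooth}(4) together with the same uniform-in-$s$ version of \cref{prp:Lieb.Robinson.Ck}; one also checks the spectral gap is preserved scale-by-scale exactly as in \cref{lem:smooth}(4), since the GNS Hamiltonian is only conjugated by a unitary. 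Finally one notes the support is still a member of $\fL_X^{\loc}$ because $\alpha_p(\sfG'\,;t)$ does not enlarge the support beyond $\mathbbl{\Lambda}$ (the evolution is generated on $\mathbbl{\Lambda}$).

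For part (2): I would estimate $\sfG_{\sfH}$ by the same computation as in the proof of \cref{lem:adiabatic.well-defined}, but with $\vvert\blank\vvert_{(s),\sU,C^k,\cdot}$ in place of $\vvert\blank\vvert_{\sU,C^k,\cdot}$. That proof bounds $\vvert \sfG_{\sfH}\vvert_{\sU,C^k,\nu,\mu}$ by $\vvert \sfH\vvert_{\sU,C^{k+1},\nu_k,\mu_{3k+1}} \cdot 2\int_0^\infty t\,\Upsilon^{(k)}_{\sfH,\nu,\mu}(t)\,w_v(t)\,dt$, and by part (1)/the uniform \cref{prp:Lieb.Robinson.Ck} the function $\Upsilon^{(k)}_{\sfH,\nu,\mu}$ — hence the integral — is controlled uniformly in $s$ in terms of $\vvert \sfH \vvert_{\loc,\sU,C^{k+1},\cdot,\cdot}$; the decay \eqref{eqn:w.function.bound} of $w_v$ guarantees integrability exactly as before. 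Taking $\sup_s$ gives $\vvert \sfG_{\sfH}\vvert_{\loc,\sU,C^k,\nu,\mu} < \infty$, i.e. $\sfG_{\sfH} \in \Omega^1(\sM,\fLDer^{\al}_{\mathbbl{\Lambda}})$; that $\sfG_{\sfH}$ is supported on $\mathbbl{\Lambda}$ is immediate from its definition as a time average of $d\sfH_{\bm{x}}$.

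The main obstacle — really the only nontrivial point — is the uniformity of the growth and brick constants across all scales $s \in [1,\infty)$. This is where the hypotheses of equi-bounded geometry / equi-polynomial growth in \cref{defn:equi.coarse} and the inheritance \cref{lem:equi.polynomial.growth.inherit.2} do all the work, and why it was essential to restrict to $X$ admitting a bi-Lipschitz embedding into Euclidean space (so that a uniform brick exists, by \cref{exmp:lattice.Rl}). Once this is granted, each estimate is a verbatim rescaled instance of the corresponding result in \cref{section:analysis.spin}, and the only bookkeeping is to confirm that the constants $c_{\nu,\mu,A}$, $d_{\nu,\mu}$, $L_{\nu,\mu}$, $C_\nu$, $\kappa_{\bB}$ and the functions $\Upsilon$, $\Upsilon^{\rel}$, $\Upsilon^{(k)}$, $\Upsilon^{\star,(k)}$ involve the lattice only through the uniform data $\kappa_{\mathbbl{\Lambda}}, l_{\mathbbl{\Lambda}}, \kappa_{\bB}$ — which a glance at their definitions confirms.
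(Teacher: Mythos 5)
Your proposal is correct and takes essentially the same route as the paper: the paper's proof also reduces to the scale-$s$ versions of the Lieb--Robinson and adiabatic estimates from \cref{prp:Lieb.Robinson,prp:Lieb.Robinson.Ck,lem:adiabatic.well-defined}, with the only substantive content being that the growth and brick constants for $(\Lambda_s,\rmd_s)$ are uniform in $s$ by \cref{lem:equi.polynomial.growth.inherit.2,exmp:lattice.Rl}, so the functions $\Upsilon_{(s),\sfG,\nu,\mu}$ are bounded by the localization norms of $\sfG$ independently of $s$. Your elaboration of which constants enter and why they are $s$-uniform matches the paper's (much terser) argument.
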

\begin{proof}
    The claims follow from \cref{prp:Lieb.Robinson}, in the same way as \cref{lem:smooth,lem:adiabatic.well-defined}. 
    The point is \cref{lem:equi.polynomial.growth.inherit.2,exmp:lattice.Rl}: A weakly equi-uniformly discrete subset $\mathbbl{\Lambda}$ of $X \times \bR^{l_{\mathbbl{\Lambda}}} \subset \bR^{l_X+l_{\mathbbl{\Lambda}}}$ has equi-polynomial growth and each $\Lambda_s$ has a brick whose growth rate is bounded by $\kappa_{\bB}(1+sr)^{2l_X+2l_{\mathbbl{\Lambda}}}$, where $l_X$ is the dimension of a Euclidean space in which $X$ admits a bi-Lipschitz embedding.
    For example, in (1), we have 
    \begin{align*}
     \|  \alpha_c(\sfG\,; t) (\sfG_{\bm{x}}' )\|_{(s),\bm{x},\nu,\mu} \leq \Upsilon_{(s),\sfG,\nu,\mu}(1) \cdot \vvert \sfG \vvert_{(s),\nu_1,\mu_1} .
    \end{align*}
    Here, for $s \in [1,\infty)$, the constant $\Upsilon_{(s),\sfG,\nu,\mu}(1)$ is the one defined in \cref{prp:Lieb.Robinson} (2) for $\mathtt{ev}_s \sfG$ with respect to the metric $\rmd_s$ of $X$, which is uniformly bounded since it depends only on the almost local norms of $\mathtt{ev}_s\sfG $.  
\end{proof}

\subsection{Localized Kitaev pump}
We define the Kitaev pump of IG localization flows.
\begin{lem}\label{lem:lem:flip.middle.path}
    Let $\sfH \in \sGP(\Lambda \midbar \sM)$ be a null-homotopic family of IG UAL Hamiltonians. 
    Then an arbitrary smooth homotopy $\widetilde{\sfH} \in \cP_{\sfH \boxtimes \sfH,\sfh\boxtimes \sfh} \sGP(\Lambda \boxtimes \Lambda \midbar \sM )$ of $\sfH \boxtimes \sfH$ and $\sfh \boxtimes \sfh$ is smoothly homotopic to its flip $\flip(\widetilde{\sfH})$ in the sheaf $\cP_{\sfH \boxtimes \sfH,\sfh\boxtimes \sfh} \sGP(\Lambda \boxtimes \Lambda \midbar \sM )$. 
\end{lem}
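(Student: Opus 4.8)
The plan is to build the homotopy between $\widetilde{\sfH}$ and $\flip(\widetilde{\sfH})$ out of two ingredients: the smooth homotopy $\widetilde{\flip}$ of \cref{lem:flip.homotopy}, which connects $\id$ to $\flip_{12}\otimes \flip_{34}$ through automorphisms fixing $\sfh$, and a ``doubling'' trick in which we tensor with a second copy of the null-homotopy $\widetilde{\sfH}$. The obstacle is that $\widetilde{\flip}$ flips \emph{pairs} of tensor factors $(12)$ and $(34)$ simultaneously, whereas what we want is the single swap of the two factors carrying $\widetilde{\sfH}$. So first I would pass to $\widetilde{\sfH} \boxtimes \widetilde{\sfH}$ on $\Lambda^{\boxtimes 4}$ and observe that, since each $\ev_t(\widetilde{\sfH}\boxtimes\widetilde{\sfH})$ stays in the sheaf $\sGP$ and $\widetilde{\flip}$ fixes $\sfh^{\boxtimes 4}$, the family
\[
    s \longmapsto \widetilde{\flip}{}^s\big(\widetilde{\sfH}\boxtimes\widetilde{\sfH}\big)
\]
is a smooth homotopy in $\cP_{(\sfH\boxtimes\sfH)^{\boxtimes 2},\, \sfh^{\boxtimes 4}}\sGP(\Lambda^{\boxtimes 4}\midbar\sM)$ connecting $\widetilde{\sfH}\boxtimes\widetilde{\sfH}$ to $(\flip_{12}\otimes\flip_{34})(\widetilde{\sfH}\boxtimes\widetilde{\sfH}) = \flip(\widetilde{\sfH})\boxtimes\flip(\widetilde{\sfH})$, where here $\flip$ means the swap of the two $\Lambda$'s inside each $\Lambda\boxtimes\Lambda$ copy. (Strictly one should invoke \cref{rmk:Ainfty} to ignore the order in which the four copies of $\Lambda$ are merged via $\fm$.)

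Next I would cancel the spurious second copy. Using \cref{lem:Hmonoid.invertible} together with the local commutative H-monoid structure on $\sGP(X)$ (or $\sGP_d$) — or more directly the construction of $\bsIP$ and the weak equivalence $\mathsf{fg}$ of \cref{lem:forget.invertible} — the section $\widetilde{\sfH}$, being an invertible path with a prescribed null-homotopy, has a homotopy inverse path, and boxing with it provides a smooth homotopy $\widetilde{\sfH}\boxtimes\widetilde{\sfH}\simeq\widetilde{\sfH}$ in $\cP_{\sfH\boxtimes\sfH,\sfh\boxtimes\sfh}\sGP(\Lambda\boxtimes\Lambda\midbar\sM)$, compatibly at the two endpoints (at $t=0$ this is the structural homotopy $\mu(\sfs,\mathbf 1)\simeq\sfs$ applied to $\sfH\boxtimes\sfH$, at $t=1$ similarly for $\sfh\boxtimes\sfh$). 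Applying $\flip$ to the same homotopy gives $\flip(\widetilde{\sfH})\boxtimes\flip(\widetilde{\sfH})\simeq\flip(\widetilde{\sfH})$ — one should double-check $\flip$ commutes with these structural homotopies up to a further homotopy, which follows from the weak strictness of the unit as in the proof of \cref{lem:bold.invertible.sheaf}. Concatenating the three homotopies
\[
    \widetilde{\sfH} \;\simeq\; \widetilde{\sfH}\boxtimes\widetilde{\sfH} \;\simeq\; \flip(\widetilde{\sfH})\boxtimes\flip(\widetilde{\sfH}) \;\simeq\; \flip(\widetilde{\sfH})
\]
in the path sheaf $\cP_{\sfH\boxtimes\sfH,\sfh\boxtimes\sfh}\sGP(\Lambda\boxtimes\Lambda\midbar\sM)$ yields the claim.

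The main obstacle I anticipate is purely bookkeeping rather than analytic: keeping track of exactly which tensor factors are being permuted at each stage, and verifying that all the intermediate homotopies genuinely live in the \emph{relative} path sheaf $\cP_{\sfH\boxtimes\sfH,\sfh\boxtimes\sfh}\sGP$ — i.e.\ that they are constant (equal to $\sfH\boxtimes\sfH$ and to $\sfh\boxtimes\sfh$ respectively) at the two endpoints $t=0,1$, not merely connecting the right things. The endpoint compatibility for the $\widetilde{\flip}$-stage is automatic because $\widetilde{\flip}$ fixes $\sfh$ and one checks $\widetilde{\flip}{}^s$ acts as $\flip_{12}$ on $\sfH^{\boxtimes 2}$-type inputs in the relevant way; the endpoint compatibility for the inverse-path stage is the content of the unit and weak-strictness axioms already packaged in \cref{lem:bold.invertible.sheaf}. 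Since no new quasilocality estimate is needed — everything takes place at a fixed lattice $\Lambda^{\boxtimes 4}$ with automorphisms drawn from $\cG_\Lambda^{\sfh}$, exactly as in \cref{lem:H-monoid} — the proof is a routine assembly of previously established structural homotopies, and I would present it as such.
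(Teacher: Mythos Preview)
Your argument has a genuine gap at the first step. The automorphism $\widetilde{\flip}^s$ of \cref{lem:flip.homotopy} is built site by site as $\bigotimes_{\bm{x}}\Ad(u_s^{\bm{x}})$ with $u_s^{\bm{x}}$ in the image of $\bR[\fS_4]$ acting on $\sH_{\lambda(\bm{x})}^{\otimes 4}$; this is what makes it a \emph{local} automorphism. It fixes $\sfh^{\boxtimes 4}$ only because each $\sfh_{\bm{x}}$ is on-site, so $\sfh_{\bm{x}}\otimes 1^{\otimes 3}+\cdots$ commutes with $u_s^{\bm{x}}$ at that single site. For a general UAL Hamiltonian the local generator $\sfH_{\bm{x}}$ is spread across many sites, and while the symmetrised sum $\sfH_{\bm{x}}\otimes 1^{\otimes 3}+\cdots$ is invariant under the \emph{global} $\fS_4$ permuting the four copies of $\cA_\Lambda$, it is not invariant under the site-wise product $\bigotimes_{\bm{y}}\Ad(u_s^{\bm{y}})$ once $u_s$ is a general element of $\bR[\fS_4]$ rather than a permutation (the tensor product over sites of a sum of permutation unitaries is not the corresponding sum of global permutation unitaries). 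Hence $\widetilde{\flip}^s(\sfH^{\boxtimes 4})\neq\sfH^{\boxtimes 4}$ for intermediate $s$, and your family $s\mapsto\widetilde{\flip}^s(\widetilde{\sfH}\boxtimes\widetilde{\sfH})$ leaves the relative path sheaf at $t=0$. Note that your proof never actually uses the null-homotopy of $\sfH$, which is the key hypothesis. Your cancellation step has a separate endpoint-mismatch problem: $\widetilde{\sfH}\boxtimes\widetilde{\sfH}$ and $\widetilde{\sfH}$ have $t=0$ endpoints $\sfH^{\boxtimes 4}$ and $\sfH^{\boxtimes 2}$ respectively, so they do not lie in the same relative path sheaf, and boxing with an inverse changes endpoints rather than restoring them.

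The paper's proof addresses exactly this obstruction, without any doubling. It uses the null-homotopy of $\sfH$ to produce, via the adiabatic connection (\cref{thm:automorphic.equivalence}), an LG automorphism $\alpha_{p,t}$ with $\omega_{\sfH(p)}\circ\alpha_{p,1}=\omega_0$, and then applies the conjugated flip $(\alpha\otimes\alpha)\circ\widetilde{\flip}^s\circ(\alpha\otimes\alpha)^{-1}$ to $\widetilde{\sfH}$ on the region where $\widetilde{\sfH}$ is away from the $t=0$ endpoint. This conjugate still does not fix the Hamiltonian $\sfH\boxtimes\sfH$, but it fixes its \emph{ground state}, since $\widetilde{\flip}^s$ fixes $\omega_0\otimes\omega_0$. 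Because two gapped Hamiltonians sharing a non-degenerate ground state are connected by a gapped convex combination, the homotopy is then filled in near $t=0$ by a straight line.
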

\begin{proof}
First, by a smooth reparametrization, we may assume that $\widetilde{\sfH}$ is constant on $\sM \times [0,1/3]$ and $\sM \times [2/3,1]$. 
From the smooth homotopy $\sfH' \in \cP\sGP(\Lambda \midbar \sM )$ connecting $\sfH$ and $\sfh$, we get a smooth family of LG automorphisms 
\[
    \alpha_{p,t} \coloneqq \alpha_p(\sfG_{\sfH'} \,; t)^{-1} \colon \sM \times [0,1] \to \cG_{\Lambda} .
\]
By \cref{thm:automorphic.equivalence}, it satisfies $\alpha_{p,0} = \id$ and $\omega_{\sfH(p)} \circ \alpha_{p,1}= \omega_0$.

Let $\widetilde{\flip}{}^s_*$ be the smooth homotopy of local automorphisms defined in \cref{lem:flip.homotopy}. 
Consider the smooth family
\begin{align*} 
    \mathtt{FL}_s \widetilde{\sfH}(p,t) \coloneqq (\alpha_{\min \{ 1, 3-3t\}} \otimes \alpha_{\min \{ 1, 3-3t\}}) \circ \widetilde{\flip}{}^{s  \max \{ 1,3t\}}_* \circ (\alpha_{\min \{ 1, 3-3t\}}^{-1} \otimes \alpha_{\min \{ 1, 3-3t\}}^{-1})(\widetilde{\sfH} (p,t))
\end{align*}
defined on $\sM \times [1/3,1]_t \times [0,1]_s$.  
Then we have 
\begin{align*}
    \mathtt{FL}_{1}\widetilde{\sfH}(p,t) ={}&{} (\alpha_{\min \{ 1, 3-3t\}}\otimes \alpha_{\min \{ 1, 3-3t\}}) \circ \flip \circ (\alpha_{\min \{ 1, 3-3t\}}^{-1} \otimes \alpha_{\min \{ 1, 3-3t\}}^{-1})(\widetilde{\sfH} (p,t))\\
    ={}&{} \flip \widetilde{\sfH}(p,t)
\end{align*}
and, by $\widetilde{\flip}{}^s(\sfh) = \sfh$, 
\begin{align*}
    \omega_{\mathtt{FL}_{1}\widetilde{\sfH}(p,1/3)} ={}&{} (\omega_{\sfH} \otimes \omega_{\sfH}) \circ (\alpha_{1}^{-1}\otimes \alpha_{1}^{-1}) \circ \widetilde{\flip}{}^s_* \circ (\alpha_{1} \otimes \alpha_{1}) \\
    ={}&{} (\omega_0 \otimes \omega_0) \circ \widetilde{\flip}{}^s_* \circ (\alpha_{1} \otimes \alpha_{1}) = \omega_{\widetilde{\sfH}(p,1/3)}. 
\end{align*}
Therefore, it extends to a smooth homotopy $\mathtt{FL}_{s} \widetilde{\sfH}$ on $\sM \times [0,1/3]_t \times [0,1]_s$ by the convex combination as
\begin{align*}
    \mathtt{FL}_{s} \widetilde{\sfH} (p,t) \coloneqq \max \{ 0,1-3t\} \cdot  \sfH \boxtimes \sfH + \min \{ 1,3t \} \cdot  \mathtt{FL}_s \widetilde{\sfH}(p,1/3).
\end{align*}
By smoothing $\mathtt{FL}_s \widetilde{\sfH}$ on $\sM \times [0,1]_t \times [0,1]_s$ as in  \cref{rmk:smoothing.path}, it gives a smooth homotopy of $\mathtt{FL}_0\widetilde{\sfH}=\widetilde{\sfH}$ and $\mathtt{FL}_1\widetilde{\sfH}= \flip \widetilde{\sfH}$ as desired. 
\end{proof}

For $\mathbbl{\Lambda} \in \fL_{X}^\loc$, we define $\mathbbl{\Sigma}\mathbbl{\Lambda} = \{ \Sigma\Lambda_s \}_{s \in [1,\infty)} \in \fL_{\mathbbm{\Sigma}X}^\loc$ by
    \begin{align*}
    \Sigma\Lambda_s \coloneqq \Lambda_s \times \frac{1}{5^{\lceil s \rceil -1 } } \bZ  \subset (X \times \bR^{l_{\mathbbl{\Lambda}}} \times \fR \times \bN ) \times \bR \cong \boldsymbol{\Sigma}X \times \bR^{l_{\mathbbl{\Lambda}}} \times \fR \times \bN .
    \end{align*}
    Here, $\lceil s \rceil$ denotes the minimal integer that is not less than $s$. 
For an interval $I \subset \bR$, each of two ends is either open or closed, we write $\mathbbl{\Sigma}^I\mathbbl{\Lambda}$ for the family of lattices $\Sigma^I\Lambda_s \coloneqq \Sigma\Lambda_s \cap I$ of $X \times I$. 

\begin{lem}\label{lem:spatial.localizing.path}
    Let $(\sfH, \check{\sfH},\overline{\sfH}) \in \bsIP_{\loc }(X\midbar \sM)$ which is supported on $\mathbbl{\Lambda} \in \fL_X^{\loc}$.
    Then, there are smooth families of IG localization flows $\sfH^{[0,1)} $, $\check{\sfH}^{[1,2)} $ and $\overline{\sfH}{}^{[0,2)} $, which is supported on $\mathbbl{\Sigma}^{[0,1)}\mathbbl{\Lambda}$, $\mathbbl{\Sigma}^{[1,2)}\mathbbl{\Lambda}$, and $\mathbbl{\Sigma}^{[0,2)}\mathbbl{\Lambda}$ respectively, such that 
    \begin{enumerate}
        \item $\overline{\sfH}{}^{[0,2)}(p,0\,; s) = \sfH^{[0,1)}(p\,; s) \boxtimes \check{\sfH}{}^{[1,2)} (p\,; s)$, $\overline{\sfH}{}^{[0,2)}(p,1\,; s) = \sfh$ and
        \item $\overline{\sfH}{}^{[0,2)} (p,t\,; 1) = \overline{\sfH}(p,t)$ under the canonical identification of $\Lambda_1^{[0,2)}$ and $\Lambda_1 \sqcup \Lambda_1$. 
    \end{enumerate}
\end{lem}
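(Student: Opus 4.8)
The plan is to transplant the construction of the (global) refined Kitaev pump of \cref{thm:inverse.pump,lem:truncated.pump} to the setting of localization flows, taking the extra localization-parameter direction carefully into account. Given $(\sfH,\check{\sfH},\overline{\sfH}) \in \bsIP_{\loc}(X \midbar \sM)$ supported on $\mathbbl{\Lambda} = \{\Lambda_s\}_{s}$, for each fixed $s$ the triple $(\sfH(\cdot\,;s),\check{\sfH}(\cdot\,;s),\overline{\sfH}(\cdot\,;s))$ is a section of $\bsIP(\Lambda_s \midbar \sM)$, and the (truncated) Kitaev pump of \cref{defn:Kitaev.pump} applied to it produces a smooth path of gapped UAL Hamiltonians on the $\bZ$-stacked lattice $\Lambda_s \times \bZ$. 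The issue is that, to stay inside $\fL_{\boldsymbol{\Sigma}X}^{\loc}$, we must stack along the rescaled copies $\tfrac{1}{5^{\lceil s\rceil - 1}}\bZ$ of $\bZ$, so that the spacing of the inserted layers shrinks as $s \to \infty$, matching the weak equi-uniform discreteness of $\mathbbl{\Sigma}\mathbbl{\Lambda}$. First I would check that plugging in the triple $(\sfH,\check{\sfH},\overline{\sfH})$ pointwise in $t$ into the formulas of \cref{defn:Kitaev.pump} (and the truncated variant of \cref{lem:truncated.pump}, using the half-ray $[0,1)$ and $[1,2)$ instead of $\bZ_{\geq 0}$) yields families of gapped localization flows on $\mathbbl{\Sigma}^{[0,1)}\mathbbl{\Lambda}$ etc.: this is a matter of verifying the localization $C^k$-norms \eqref{eqn:localizing.norm} are finite, which follows because the $\boxtimes$-product, flip, and reparametrizations preserve the relevant seminorms uniformly in $s$, and the tensor-product structure of the time evolution and ground state guarantees the spectral gap $1$ is maintained.

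Concretely, I would set $\sfH^{[0,1)}$, $\check{\sfH}^{[1,2)}$, $\overline{\sfH}{}^{[0,2)}$ to be the localization-flow analogues of $\kappa_d^R(\bsfH)$, $\check{\kappa}_d^R(\bsfH)$, $\bar{\kappa}_d^R(\bsfH)$ from \cref{lem:truncated.pump}, but placed on the half-rays with rescaled spacing. The null-homotopy $\overline{\sfH}{}^{[0,2)}$ is the localization-flow version of $\bar\kappa_d^R$, built as the composition of reparametrizations and an application of \cref{lem:flip.homotopy} exactly as in \eqref{eqn:loop.reverse.homotopy}, but carried out $s$-pointwise and checked to be a morphism of sheaves of localization flows via \cref{prp:localizing.induced.hom} and \cref{lem:auto.localizing.path}. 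Properties (1) and (2) are then the localization-flow incarnations of conditions (2), (3) of \cref{lem:truncated.pump}: condition (1) (the endpoint values $\sfH^{[0,1)}\boxtimes\check{\sfH}{}^{[1,2)}$ at $t=0$ and $\sfh$ at $t=1$) holds by the construction of the composition of homotopies, and condition (2) (restriction to $s=1$ recovers $\overline{\sfH}$) holds because at $s=1$ the rescaling factor $5^{\lceil 1\rceil - 1}=1$ is trivial, so $\Sigma\Lambda_1 \cong \Lambda_1 \times \bZ$, and the truncated pump with only two stacked layers (indices $0$ and $1$) is literally $\overline{\sfH}$ under the identification $\Lambda_1^{[0,2)} \cong \Lambda_1 \sqcup \Lambda_1$ — this is precisely the content of \cref{lem:truncated.pump}(3).

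The main obstacle I anticipate is the bookkeeping needed to confirm that $\mathbbl{\Sigma}^{[0,2)}\mathbbl{\Lambda}$ genuinely lies in $\fL_{\boldsymbol{\Sigma}X}^{\loc}$ and that the pumped flows are supported on it \emph{compatibly with the lower-hemicontinuity} in $p \in \sM$: the lattice $\Lambda_s$ depends on $p$ (it is a lower-hemicontinuous family over $X$), the number of stacked layers can grow, and the relabelling $\bN \rightsquigarrow \bN^2$ flagged in \cref{defn:localizing.path}(1) has to be used to accommodate the extra index coming from the $\bZ$-direction of the pump without overflowing the internal-degree bound $n_{\mathbbl{\Lambda}}$ — this is exactly the "reason for reinterpreting $\bN$ as $\bN^2$" mentioned after \cref{defn:localizing.path}. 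I would handle this by fixing once and for all an injection $\bN \times \bN \hookrightarrow \bN$ (as in \cref{line.mu}) to absorb the $\bZ_{\geq 0}$-layer index into the internal-degree slot, checking that the resulting family still satisfies (i)–(iii) of \cref{defn:localizing.path}(1): (i) is clear since $n_{\mathbbl{\Sigma}\mathbbl{\Lambda}}$ can be taken to be $n_{\mathbbl{\Lambda}}$ composed with the injection, (ii) follows because the layer spacing $5^{-(\lceil s\rceil - 1)}$ decays, so adding layers at this spacing preserves weak equi-uniform discreteness (with a modestly larger constant $N$), and (iii) follows from equi-linear properness of $\pr_{\boldsymbol{\Sigma}X} = \pr_X \times \id_{\bR}$ together with the fact that the $\bR$-coordinate of $\Sigma^{[0,2)}\Lambda_s$ is bounded. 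Everything else — the spectral gap, the almost-local estimates, the endpoint identities — is a routine transcription of the $d$-dimensional arguments, now uniform in the localization parameter $s$.
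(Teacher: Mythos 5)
Your proposal misidentifies the objects being constructed and therefore misses the central mechanism. The triple $\bigl(\kappa_d^R(\bsfH),\check\kappa_d^R(\bsfH),\bar\kappa_d^R(\bsfH)\bigr)$ of \cref{lem:truncated.pump} consists entirely of \emph{paths} in $t$ (sections of $\cP_\sfh\bsIP_{d+1}$), whereas the lemma asks for $\sfH^{[0,1)}$ and $\check\sfH^{[1,2)}$ to be \emph{single} localization flows (only an $s$-parameter, no $t$), with $\overline{\sfH}{}^{[0,2)}$ being the unique one that carries a $t$-homotopy. Declaring these to be ``localization-flow analogues of $\kappa_d^R,\check\kappa_d^R,\bar\kappa_d^R$'' does not type-check, and it conflates the $s$- and $t$-directions: your justification of condition (2) appeals to \cref{lem:truncated.pump}(3), which identifies the value at $t=1$ of a homotopy, not the value at $s=1$ of a flow. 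Moreover, $\Sigma^{[0,1)}\Lambda_s = \Sigma\Lambda_s \cap (X\times[0,1)\times\cdots)$ is a \emph{finite} set (with $5^{\lceil s\rceil-1}$ stacked copies of $\Lambda_s$), not a ``half-ray''; restricting the infinite pump to it would not yield a gapped Hamiltonian, and nothing in the pump formulas produces the required growing family.

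The actual crux, which your proposal never touches, is the self-similar refinement in $s$: each time $s$ crosses an integer $n$, the $[0,2)$-slice of $\Sigma\Lambda_s$ is subdivided by a factor of $5$, and the Hamiltonian must smoothly turn on $\sfH\boxtimes\check\sfH$ pairs at the newly appearing sites. The paper achieves this by inserting $\overline{\sfH}(p,s_n\,;s)$ with $s_n=\max\{0,2n+1-2s\}$, so that new tensor factors enter as $\overline{\sfH}(\cdot,1\,;s)=\sfh$ (trivial) at $s=n$ and become $\overline{\sfH}(\cdot,0\,;s)=\sfH\boxtimes\check\sfH$ by $s=n+\tfrac12$; similarly, $\overline{\sfH}{}^{[0,2)}$ requires the flip homotopy $\mathtt{FL}$ of \cref{lem:lem:flip.middle.path} to reorganize the tensor legs coherently across subdivisions. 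Neither device is present in your proposal, and without them there is no candidate construction at all. Finally, the obstacle you flag concerning the $\bN\rightsquigarrow\bN^2$ relabelling is a red herring here: the remark following \cref{defn:localizing.path} explicitly attributes that bookkeeping to the proof of \cref{thm:scaleable}, and in \cref{lem:spatial.localizing.path} the stacking direction is absorbed into the $\bR$-factor of $\boldsymbol\Sigma X$, consuming no extra internal-degree slot.
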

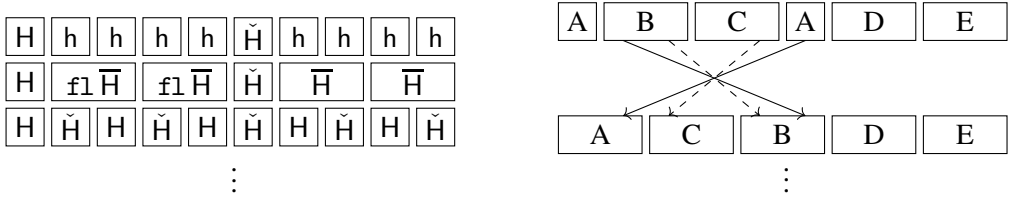
\begin{figure}[t]
    \centering
    \begin{minipage}[b]{0.45\hsize}
        \begin{tikzpicture}
        \draw (0,0) rectangle (0.5,0.5);
        \draw (0.6,0) rectangle (1.1,0.5);
        \draw (1.2,0) rectangle (1.7,0.5);
        \draw (1.8,0) rectangle (2.3,0.5);
        \draw (2.4,0) rectangle (2.9,0.5);
        \draw (3.0,0) rectangle (3.5,0.5);
        \draw (3.6,0) rectangle (4.1,0.5);
        \draw (4.2,0) rectangle (4.7,0.5);
        \draw (4.8,0) rectangle (5.3,0.5);
        \draw (5.4,0) rectangle (5.9,0.5);
        \node at (0.25,0.25) { $\sfH$};
        \node at (0.85,0.25) { $\sfh$};
        \node at (1.45,0.25) { $\sfh$};
        \node at (2.05,0.25) { $\sfh$};
        \node at (2.65,0.25) { $\sfh$};
        \node at (3.25,0.25) { $\check{\sfH}$};
        \node at (3.85,0.25) { $\sfh$};
        \node at (4.45,0.25) { $\sfh$};
        \node at (5.05,0.25) { $\sfh$};
        \node at (5.65,0.25) { $\sfh$};
        \draw (0,-0.1) rectangle (0.5,-0.6);
        \draw (0.6,-0.1) rectangle (1.7,-0.6);
        \draw (1.8,-0.1) rectangle (2.9,-0.6);
        \draw (3,-0.1) rectangle (3.5,-0.6);
        \draw (3.6,-0.1) rectangle (4.7,-0.6);
        \draw (4.8,-0.1) rectangle (5.9,-0.6);
        \node at (0.25,-0.35) { $\sfH$};
        \node at (1.15,-0.35) { $\flip \overline{\sfH}$};
        \node at (2.35,-0.35) { $\flip \overline{\sfH}$};
        \node at (3.25,-0.35) { $\check{\sfH}$};
        \node at (4.15,-0.35) { $\overline{\sfH}$};
        \node at (5.35,-0.35) { $\overline{\sfH}$};
        \draw (0,-0.7) rectangle (0.5,-1.2);
        \draw (0.6,-0.7) rectangle (1.1,-1.2);
        \draw (1.2,-0.7) rectangle (1.7,-1.2);
        \draw (1.8,-0.7) rectangle (2.3,-1.2);
        \draw (2.4,-0.7) rectangle (2.9,-1.2);
        \draw (3.0,-0.7) rectangle (3.5,-1.2);
        \draw (3.6,-0.7) rectangle (4.1,-1.2);
        \draw (4.2,-0.7) rectangle (4.7,-1.2);
        \draw (4.8,-0.7) rectangle (5.3,-1.2);
        \draw (5.4,-0.7) rectangle (5.9,-1.2);
        \node at (0.25,-0.95) { $\sfH$};
        \node at (0.85,-0.95) { $\check{\sfH}$};
        \node at (1.45,-0.95) { $\sfH$};
        \node at (2.05,-0.95) { $\check{\sfH}$};
        \node at (2.65,-0.95) { $\sfH$};
        \node at (3.25,-0.95) { $\check{\sfH}$};
        \node at (3.85,-0.95) { $\sfH$};
        \node at (4.45,-0.95) { $\check{\sfH}$};
        \node at (5.05,-0.95) { $\sfH$};
        \node at (5.65,-0.95) { $\check{\sfH}$};
        \node at (3.0,-1.55) {$\vdots$};
        \end{tikzpicture}
    \end{minipage}
    \begin{minipage}[b]{0.45\hsize}
        \begin{tikzpicture}
        \draw (0,0) rectangle (0.5,0.5);
        \draw (0.6,0) rectangle (1.7,0.5);
        \draw (1.8,0) rectangle (2.9,0.5);
        \draw (3,0) rectangle (3.5,0.5);
        \draw (3.6,0) rectangle (4.7,0.5);
        \draw (4.8,0) rectangle (5.9,0.5);
        \node at (0.25,0.25) { A};
        \node at (1.15,0.25) { B};
        \node at (2.35,0.25) { C};
        \node at (3.25,0.25) { A};
        \node at (4.15,0.25) { D};
        \node at (5.35,0.25) { E};
        \draw[->] (0.85,0) to (3.25,-1.0);
        \draw[->] (3.25,0) to (0.85,-1.0);
        \draw[dashed,->] (1.45,0) to (2.65,-1.0);
        \draw[dashed,->] (2.65,0) to (1.45,-1.0);
        \draw (0.0,-1.0) rectangle (1.1,-1.5);
        \draw (1.2,-1.0) rectangle (2.3,-1.5);
        \draw (2.4,-1.0) rectangle (3.5,-1.5);
        \draw (3.6,-1.0) rectangle (4.7,-1.5);
        \draw (4.8,-1.0) rectangle (5.9,-1.5);
        \node at (0.55,-1.25) {A};
        \node at (1.75,-1.25) {C};
        \node at (2.95,-1.25) {B};
        \node at (4.15,-1.25) {D};
        \node at (5.35,-1.25) {E};
        \node at (3.0,-1.75) {$\vdots$};
        \end{tikzpicture}
    \end{minipage}
    \caption{The picture of $\sfH^{[0,2)}$ (left) and the flip homotopy of $\overline{\sfH}{}^{[0,2)}$ at $s \in [3/2,2]$ (right). On each A, B, C, D, and E in the right picture, the smooth path $\overline{\sfH}$ is placed. }
    \label{fig:picture.local.pump}
\end{figure}
\begin{proof}
    First, we define $\sfH^{[0,1)}$ and $\check{\sfH}^{[1,2)}$. 
    When $\lceil s \rceil $ increases by $1$, the $[a,b)$-component of $\Lambda_s^{[a,b)}$ are subdivided into $1/5$. 
    We insert $\sfH \boxtimes \check{\sfH}$ at newly appeared $4$ copies of $\Lambda_s$ via the homotopy $\overline{\sfH}$. 
    More precisely, for $(p,s) \in \sM \times [n,n+1)$, let
    \begin{align*}
        \sfH^{[0,2)}(p\,; s) \coloneqq {}&{} \Big( \sfH(p\,; s) \boxtimes \flip \overline{\sfH}(p, s_n \,; s) \boxtimes \flip \overline{\sfH}(p, s_n \,; s) \boxtimes  \check{\sfH}(p\,; s) \boxtimes  \overline{\sfH} (p, s_n \,; s) \boxtimes \overline{\sfH}(p, s_n \,; s) \Big)^{\boxtimes 5^{n-1} },
    \end{align*}
    where $s_n \coloneqq \max \{ 0,2n+1-2s\}$,
    which are placed on $\Lambda_s^{\sqcup 2 \cdot 5^{n}} \cong \Lambda_s^{[0,2)}$. 
    We then define $\sfH^{[0,1)}$ and $\check{\sfH}{}^{[1,2)}$ by the restrictions of $\sfH^{[0,2)}$ to $\Sigma^{[0,1)}\Lambda_s$ and $\Sigma^{[1,2)}\Lambda_s$ respectively.    

    At $s \in [1,3/2]$, we define $\overline{\sfH}{}^{[0,2)}$ by using the leg notation on $\cA_{\Lambda_s^{[0,2)}} \cong \cA_{\Lambda_s}^{\otimes 10}$ as
    \begin{align*}
    \overline{\sfH}{}^{[0,2)} (p,t \,; s) \coloneqq 
    \overline{\sfH} (p,t \,; s)_{0,5} \boxtimes \overline{\sfH} (p,t_s \,; s)_{2,1} \boxtimes \overline{\sfH} (p,t_s \,; s)_{4,3} \boxtimes \overline{\sfH} (p,t_s \,; s)_{6,7} \boxtimes \overline{\sfH} (p,t_s \,; s)_{8,9},
    \end{align*}
    where $t_s = \min \{ 3-2s+t ,1\}$. At $s \in [3/2,2]$, we define a homotopy $\overline{\sfH}{}^{[0,2)} (p,t\,; s)$ of $\overline{\sfH}{}^{[0,2)} (p,t\,; 3/2)$ and 
    \begin{align*}
        \overline{\sfH}{}^{[0,2)} (p,t\,; 2) \coloneqq \overline{\sfH}{}^{[0,2)} (p,t \,; s)^{\boxtimes 5} = (\flip_{1,5} \otimes \flip_{2,4}) \big( \overline{\sfH}{}^{[0,2)} (p,t \,; 3/2) \big)
    \end{align*}
    by using the homotopy $\mathtt{FL}$  in \cref{lem:lem:flip.middle.path} as
    \begin{align*} 
    \overline{\sfH}{}^{[0,2)} (p,t\,; s) \coloneqq \big( (\mathtt{FL}_{2s-3})_{1,5} \otimes (\mathtt{FL}_{2s-3})_{2,4} \big) \big( \overline{\sfH}{}^{[0,2)} (p,t \,; 3/2) \big).
    \end{align*} 
    At $s \in \bR_{\geq 2}$, we extend $\overline{\sfH}{}^{[0,2)}(p,t\,; s)$ in a self-similar way. Finally, the above $\sfH^{[0,1)}$, $\check{\sfH}^{[1,2)}$, and $\overline{\sfH}{}^{[0,2)}$ are localization flows with respect to the metric induced from that of $X \times [0,w] \times \bR^{l_{\mathbbl{\Lambda}}}$.  
\end{proof}
We remark that $\sfH^{[0,1)}$ is of the form $\sfH \boxtimes  \sfH^{(0,1)}$, where $\sfH^{(0,1)}$ is supported on $\mathbbl{\Lambda}^{(a,b)} = \mathbbl{\Lambda}^{[a,b)} \setminus (\mathbbl{\Lambda} \times \{a \} )$. Similarly, $\check{\sfH}^{[1,2)}$ is also of the form $\check{\sfH} \boxtimes \check{\sfH}^{(1,2)}$, and $\check{\sfH}^{(1,2)}$ is the same as the reflection of $\sfH^{(0,1)}$ in the $\bR$-direction after a translation.

\begin{defn}\label{defn:localizing.Kitaev.pump}
    The localized Kitaev pump $\kappa_{X}^{\loc} \colon \bsIP_{\loc}(X) \to \Omega \sIP_{\loc}(\boldsymbol{\Sigma} X)$ is defined by
\begin{align*}
    \kappa_X^{\loc} (\bsfH)|_t = \left\{
        \begin{array}{rcccccccccccll}
            \cdots &\boxtimes & \sfh & \boxtimes & \sfh & \boxtimes & \sfh &\boxtimes & \sfh & \boxtimes &\cdots & t= 0, \\
            \cdots &\boxtimes & \multicolumn{3}{c}{\overline{\sfH}^{[-2,0)}|_{1-2t}}  & \boxtimes &\multicolumn{3}{c}{\overline{\sfH}{}^{[-0,2)}|_{1-2t}} & \boxtimes &\cdots & 0<t<1/2, \\
            \cdots &\boxtimes & \sfH{}^{[-2,-1)}  & \boxtimes & \check{\sfH}{}^{[-1,0)} & \boxtimes & \sfH{}^{[0,1)} & \boxtimes & \check{\sfH}{}^{[1,2)} & \boxtimes  & \cdots & t=1/2, \\            
            \cdots&\multicolumn{2}{c}
            {R_* \overline{\sfH}{}^{(-3,-1]}|_{2t-1}}
            & \boxtimes & \multicolumn{3}{c}
            {R_* \overline{\sfH}{}^{(-1,1]}|_{2t-1}} & \boxtimes &\multicolumn{2}{c}{R_* \overline{\sfH}^{(1,3]}|_{2t-1} }&  \cdots &  1/2<t<1, \\            
            \cdots &\boxtimes & \sfh & \boxtimes & \sfh & \boxtimes & \sfh &\boxtimes & \sfh & \boxtimes &  \cdots & t = 1,
        \end{array}
        \right.
\end{align*}
as is illustrated in \cref{fig:picture.kappa.local}. Here, $R \colon X \times [a,b) \to X \times (a,b]$ denotes the reflection in the second variable, i.e., $R(\bm{x},y) = (\bm{x},a+b-y)$.
\end{defn}

\begin{figure}[t]
    \centering
        \begin{tikzpicture}[scale=1.25]
        \fill[color = gray] (1.8,0.7) rectangle (1.9,-1.7);
        \fill[color = gray] (3.3,0.7) rectangle (3.4,-1.7);
        \fill[color = gray] (4.8,0.7) rectangle (4.9,-1.7);
        \fill[color = gray] (6.3,0.7) rectangle (6.4,-1.7);
        \fill[color = gray] (7.8,0.7) rectangle (7.9,-1.7);
        \fill[color = gray] (9.3,0.7) rectangle (9.4,-1.7);
        \fill[color = gray] (10.8,0.7) rectangle (10.9,-1.7);
        \fill[color = gray] (12.3,0.7) rectangle (12.4,-1.7);
        \draw[fill=white] (1.8,0) rectangle (4.7,0.5);
        \node at (3.25,0.25) { $\overline{\sfH}^{[-4,-2)}|_{1-2t}$};
        \draw[fill=white] (4.8,0) rectangle (7.7,0.5);
        \node at (6.35,0.25) { $ \overline{\sfH}^{[-2,0)}|_{1-2t}$};
        \draw[fill=white] (7.8,0) rectangle (10.7,0.5);
        \node at (9.25,0.25) { $ \overline{\sfH}^{[0,2)}|_{1-2t}$};
        \draw[fill=white] (10.8,0) rectangle (13.7,0.5);
        \node at (12.25,0.25) { $\overline{\sfH}^{[2,4)}|_{1-2t}$};
        \draw[fill=white] (1.8,-0.25) rectangle (3.2,-0.75);
        \node at (2.5,-0.5) {\scalebox{0.95}{ $\sfH^{[-4,-3)}$}};
        \draw[fill=white] (3.3,-0.25) rectangle (4.7,-0.75);
        \node at (4.0,-0.5) {\scalebox{0.95}{ $\check{\sfH}^{[-3,-2)}$}};
        \draw[fill=white] (4.8,-0.25) rectangle (6.2,-0.75);
        \node at (5.5,-0.5) {\scalebox{0.95}{$\sfH^{[-2,-1)}$}};
        \draw[fill=white] (6.3,-0.25) rectangle (7.7,-0.75);
        \node at (7.0,-0.5) {\scalebox{0.95}{ $\check{\sfH}^{[-1,0)}$}};
        \draw[fill=white] (7.8,-0.25) rectangle (9.2,-0.75);
        \node at (8.5,-0.5) {\scalebox{0.95}{ $\sfH^{[0,1)}$}};
        \draw[fill=white] (9.3,-0.25) rectangle (10.7,-0.75);
        \node at (10.0,-0.5) {\scalebox{0.95}{ $ \check{\sfH}^{[1,2)}$}};
        \draw[fill=white] (10.8,-0.25) rectangle (12.2,-0.75);
        \node at (11.5,-0.5) {\scalebox{0.95}{ $\sfH^{[2,3)}$}};
        \draw[fill=white] (12.3,-0.25) rectangle (13.7,-0.75);
        \node at (13.0,-0.5) {\scalebox{0.95}{ $\check{\sfH}^{[3,4)}$}};
        \draw [fill=white] (1.7,-1.0) -- (3.2,-1.0) -- (3.2,-1.5) -- (1.7,-1.5);
        \draw[fill=white] (3.3,-1.0) rectangle (6.2,-1.5);
        \node at (4.75,-1.25) { $R_* \overline{\sfH}^{(-3,-1]}|_{2t-1}$};
        \draw[fill=white] (6.3,-1.0) rectangle (9.2,-1.5);
        \node at (7.75,-1.25) { $R_* \overline{\sfH}^{(-1,1]}|_{2t-1}$};
        \draw[fill=white] (9.3,-1.0) rectangle (12.2,-1.5);
        \node at (10.75,-1.25) { $R_* \overline{\sfH}^{(1,3]}|_{2t-1}$};
        \draw[fill=white] (14.2,-1.0) -- (12.3,-1.0) -- (12.3,-1.5) -- (14.2,-1.5);
        \end{tikzpicture}
    \caption{The picture of $\kappa_{X,\loc}(\bsfH)$ (here $\sfH|_t$ is an abbreviation for $\ev_t\sfH$). Background gray vertical lines represent the integer lattice $\bZ \subset \bR$.}
    \label{fig:picture.kappa.local}
\end{figure}
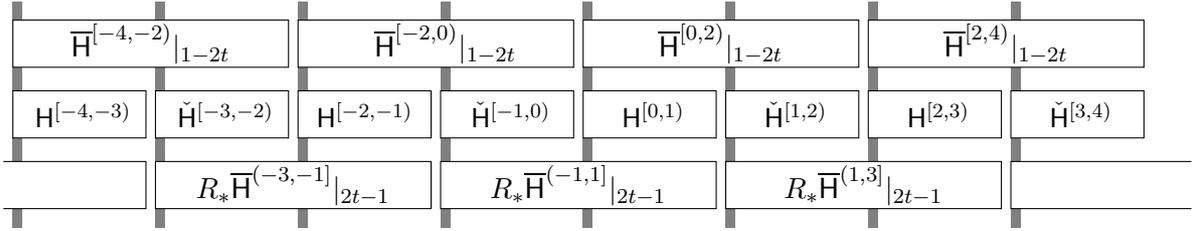

We give a concise summary of the in \cref{section:Kitaev} generalizes to the sheaf $\sIP_{\loc}(X)$. 
First, in the same way as the proof of \cref{thm:inverse.pump}, especially \eqref{eqn:loop.reverse.homotopy}, the above $\kappa_{X}^\loc$ lifts to 
\begin{align*} 
    \boldsymbol{\kappa}_{X}^{\loc} \colon \bsIP(X) \to \Omega \bsIP( \boldsymbol{\Sigma} X)
\end{align*}
by defining $\check{\kappa}_{X}^\loc(\bsfH)$ to be the reversed path of $\kappa_{X}^\loc(\bsfH)$. 

In the same way as \cref{lem:truncated.pump}, the truncated version of the localized Kitaev pump is also defined as an IG localization flow on $X \times \bR_{\geq 0}$. 
Consequently, we obtain the following Lipschitz homotopy invariance in the same way as \cref{cor:coarse.homotopy.Hamiltonian}.  
\begin{cor}\label{cor:localizing.coarse.homotopy}
    Let $F_0, F_1 \colon X \to Y$ be Lipschitz continuous maps. 
    If there is a large-scale Lipschitz homotopy $\widetilde{F} \colon \mathbf{I}_\varphi X \to Y$ of $F_0$ and $F_1$ that is also a Lipschitz continuous map, then the induced maps $F_{0,*}, F_{1,*} \colon \sIP_\loc(X) \to \sIP_\loc(Y)$ are smoothly homotopic.
\end{cor}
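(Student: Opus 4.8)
\textbf{Proof plan for \cref{cor:localizing.coarse.homotopy}.}
The strategy is to mimic the proof of \cref{cor:coarse.homotopy.Hamiltonian}, replacing the (discrete) truncated Kitaev pump of \cref{lem:truncated.pump} with its localized analogue, and upgrading the coarse-homotopy bookkeeping of \cref{exmp:Lipschitz.homotopy} (2) to the equi-Lipschitz, localization-flow setting. First I would record the truncated localized Kitaev pump: paralleling \cref{lem:truncated.pump}, the construction of \cref{defn:localizing.Kitaev.pump} restricted to $X \times \bZ_{\geq 0}$ produces a morphism $\boldsymbol{\kappa}_X^{\loc,R} \colon \bsIP_{\loc}(X) \to \cP_{\sfh}\bsIP_{\loc}(\boldsymbol{\Sigma} X)$ with $\mathtt{ev}_1 \boldsymbol{\kappa}_X^{\loc,R}(\bsfH) = \bsfH \boxtimes \sfh^{\boxtimes \bZ_{\geq 1}}$, supported on $\mathbbl{\Sigma}^{[0,\infty)}\mathbbl{\Lambda}$; this is routine since all the estimates in \cref{lem:spatial.localizing.path} and \cref{lem:auto.localizing.path} are already uniform in the localization parameter $s$.

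Next I would establish the localized analogue of \cref{prp:coarse.homotopy}: given a linear function $\varphi(r) = m(1+r)$ and $\bsfH \in \bsIP_{\loc}(X \midbar \sM)$ supported on $\mathbbl{\Lambda}$, there is $r_0 > 0$ and a smooth homotopy of IG localization flows supported on $\mathbf{I}_{2\varphi + r_0}X$ connecting $\bsfH$ to a family supported on $\mathbf{I}_{2\varphi + r_0}X \setminus \mathbf{I}_\varphi X$. The ingredients are \cref{thm:interpolation.loop}, \cref{lem:cut.diffused}, and \cref{prp:cut.trivial.Hamiltonian} applied to $\boldsymbol{\kappa}_X^{\loc,R}(\bsfH)$ with $Y = (\boldsymbol{\Sigma}X)_{\geq 0} \setminus \mathbf{I}_{2\varphi + r_0}X$ and $Z = \mathbf{I}_\varphi X$; one must check that this pair is linearly coarsely transverse \emph{uniformly in} $s$, which follows because the truncation functions $\varphi, 2\varphi + r_0$ are $s$-independent while the metric rescaling only improves the transversality constants. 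Here the key point is that \cref{lem:auto.localizing.path} (2) guarantees $\sfG_{\sfH} \in \Omega^1(\sM, \fLDer^{\al}_{\mathbbl{\Lambda}})$, so the adiabatic interpolation $\tilde{\vartheta}$ of \cref{thm:interpolation.loop} produces a \emph{localization flow}, not merely a path of Hamiltonians; the decay constants $K^{(k)}_{Y,\cdot,\nu,\mu}$ and $\Upsilon^{\star,(k)}$ are bounded uniformly over $s \in [1,\infty)$ because, as noted in \cref{exmp:lattice.Rl}, the brick growth rate of each $\Lambda_s$ is controlled by $\kappa_{\bB}(1 + sr)^{2l_{\mathbbl{\Lambda}}}$ with a single constant.

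Finally, given the Lipschitz continuous large-scale Lipschitz homotopy $\widetilde{F} \colon \mathbf{I}_\varphi X \to Y$ with $\widetilde{F} \circ \iota_i = F_i$, I would imitate the argument of \cref{cor:coarse.homotopy.Hamiltonian}: extend $\widetilde{F}$ to $\widetilde{F}' \colon \mathbf{I}_{2\varphi + r_0}X \to Y \times \bR$ by $\widetilde{F}'(\bm{x},y) = (\widetilde{F}(\bm{x}, \min\{y, \varphi(\rmd(\bm{x},\bm{x}_0))\}), y)$, and $F_1' (\bm{x},y) \coloneqq (F_1(\bm{x}), y)$, note $\widetilde{F}' = F_1'$ on $\mathbf{I}_{2\varphi + r_0}X \setminus \mathbf{I}_\varphi X$, both are linearly proper Lipschitz continuous, and apply the previous step together with \cref{prp:localizing.induced.hom} to get
\[
    F_{0,*}(\bsfH) = \widetilde{F}'_*(\bsfH \boxtimes \sfh) \simeq \widetilde{F}'_*(\bsfH') = F_{1,*}'(\bsfH') \simeq F_{1,*}'(\bsfH \boxtimes \sfh) = F_{1,*}(\bsfH),
\]
where the outer homotopies use that $\pr_{\bR^{d+\infty}}$-components of the relevant lifts agree (second clause of \cref{prp:localizing.induced.hom}) combined with \cref{exmp:Lipschitz.homotopy} (2)-type scaling in the added coordinate; the desired conclusion on $\sIP_{\loc}$ follows from the weak equivalence $\mathsf{fg} \colon \bsIP_{\loc}(\blank) \to \sIP_{\loc}(\blank)$. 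The main obstacle I anticipate is verifying that $\tilde{\vartheta}$ applied to the localized pump genuinely yields a section of $\sGP_{\loc}$, i.e.\ that the localization $C^k$-norms \eqref{eqn:localizing.norm} remain finite: this requires tracking that every constant appearing in \cref{lem:LGA.cone.decomposition}, \cref{cor:Lieb.Robinson.approx.Ck}, and \cref{rmk:cutoff.approximate} is bounded uniformly in $s$, which hinges on the equi-polynomial growth of $\mathbbl{\Lambda}$ from \cref{lem:equi.polynomial.growth.inherit.2} and the $s$-uniform brick estimates of \cref{exmp:lattice.Rl}. Everything else is a faithful transcription of \cref{subsection:bulk.boundary}'s machinery with the localization parameter carried along.
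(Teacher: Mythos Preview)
Your proposal is correct and follows exactly the approach indicated in the paper: the paper's entire proof is the sentence ``Consequently, we obtain the following Lipschitz homotopy invariance in the same way as \cref{cor:coarse.homotopy.Hamiltonian},'' preceded only by the remark that the truncated localized Kitaev pump exists as an IG localization flow on $X \times \bR_{\geq 0}$. You have simply unpacked this reference in detail, correctly identifying the localized analogues of \cref{lem:truncated.pump} and \cref{prp:coarse.homotopy} that need to be checked and the uniformity-in-$s$ issues that make the transcription go through.
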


 Next, the bulk-boundary correspondence in \cref{prp:Eilenberg.swindle} is generalized to IG localization flows. Here, the scope of the theorem is also generalized by using the terminology in \cref{defn:coarse.flasque.scaleable}. 
\begin{prp}\label{prp:flasque}
    If $X$ is topologically flasque, then $\sIP_{\loc}(X)$ is weakly contractible.  
\end{prp}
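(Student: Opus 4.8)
\textbf{Proof proposal for \cref{prp:flasque}.}
The plan is to mimic the Eilenberg swindle argument of \cref{prp:Eilenberg.swindle}, but carried out level-by-level in the localization parameter $s$, using the flasqueness map $T_X$ to push Hamiltonians off to infinity. Fix a compact manifold $\sM$; I will show that $\bsIP_{\loc}[\sM]_{?}$ is trivial, i.e. that any $\bsfH = (\sfH,\check{\sfH},\overline{\sfH}) \in \bsIP_{\loc}(X \midbar \sM)$, supported on some $\mathbbl{\Lambda} \in \fL_X^{\loc}$, is smoothly homotopic to $\sfh$. The key point is that the flasque structure gives a $1$-Lipschitz map $T \coloneqq T_{X,1} \colon X \to X$ together with the linearly proper Lipschitz homotopy $T_X \colon X \times [0,1] \to X$ from $\id$ to $T$, such that iterated images $\Im(T^n)$ eventually leave any bounded set. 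Since $T$ is $1$-Lipschitz and linearly proper, by \cref{cor:localizing.coarse.homotopy} (and its proof via the truncated localized Kitaev pump) the push-forward $T_* \colon \sIP_{\loc}(X) \to \sIP_{\loc}(X)$ is smoothly homotopic to the identity.

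First I would assemble the Eilenberg swindle localization flow. Using the localized Kitaev pump machinery of \cref{defn:localizing.Kitaev.pump,lem:spatial.localizing.path}, one builds from $\bsfH$ a smooth family of IG localization flows on the lattice
\[
    \widetilde{\mathbbl{\Lambda}} \coloneqq \bigsqcup_{n \in \bZ_{\geq 0}} \Big( (T^n)_* \mathbbl{\Lambda} \boxtimes (T^n)_* \check{\mathbbl{\Lambda}} \Big)
\]
obtained by stacking infinitely many translated copies of $\bsfH \boxtimes \check{\bsfH}$, each pushed forward by a power of $T$, with the homotopy $\overline{\sfH}$ of $\sfH \boxtimes \check{\sfH}$ and $\sfh$ interpolating between consecutive copies (exactly as $\kappa_d^R$ interpolates between the stacked copies in \cref{fig:Eilenberg.swindle,fig:picture.kappa.bold.tr}). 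One must check that $\widetilde{\mathbbl{\Lambda}} \in \fL_X^{\loc}$: conditions (i) and (ii) of \cref{defn:localizing.path} are straightforward since the copies sit in disjoint internal-degree slabs (using the relabeling $\bN \to \bN^2$ in \cref{defn:localizing.path} (1) to accommodate the infinite stacking), and condition (iii), equi-linear properness of $\pr_X$, is exactly where the linear properness of the flasque homotopy $T_X$ is used, giving a uniform linear control $\mathrm{diam}\, \pr_X^{-1}(B_r(\bm{x})) \leq \varphi(sr)$ on all the translated copies simultaneously. The resulting section gives a smooth homotopy, through $\sIP_{\loc}(X\midbar\sM)$, connecting
\[
    \bsfH = \bsfH \boxtimes \check{\bsfH} \boxtimes (T_*\bsfH) \boxtimes (T_*\check{\bsfH}) \boxtimes \cdots \quad \text{and} \quad \sfh ,
\]
where the right-hand infinite system is $\sfh$ because the `$\infty = \infty + 1$' regrouping cancels each $\bsfH$ against the $\check{\bsfH}$ of the previous block via $\overline{\sfH}$. (More precisely, one first homotopes $\bsfH$ to $\bsfH\boxtimes\sfh^{\boxtimes\bN}$ and then runs the swindle; the localization-flow estimates needed for the infinite composite are the $\sup_s$-uniform versions of the Lieb--Robinson bounds from \cref{lem:auto.localizing.path}.)

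The main obstacle I expect is the analytic bookkeeping of the localization norms \eqref{eqn:localizing.norm} for the infinite stacked system: one needs that $\vvert \widetilde{\sfH} \vvert_{\loc,\sU,C^k,f} < \infty$, and since $\widetilde{\sfH}$ is a genuinely infinite composite the local generators at a point $\bm{x}$ in the $n$-th block receive contributions from the interpolating homotopy $\overline{\sfH}$, whose range in the $\bR$-direction of the stacking must be controlled relative to the metric $\rmd_s$. Here the crucial input is that $T_{X,1}$ is $1$-Lipschitz (not merely Lipschitz), so the successive copies are \emph{not} expanding and the $F$-function decay of $\overline{\sfH}$ (which holds uniformly in $s$ by hypothesis $\bsfH \in \bsIP_{\loc}$) is preserved under all the push-forwards $(T^n)_*$; the estimate \eqref{eqn:f-norm.push} with $\beta = 0$, together with the equi-polynomial growth guaranteed by \cref{lem:equi.polynomial.growth.inherit.2,exmp:lattice.Rl}, then gives a geometric-series bound summing the contributions of all blocks. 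Once this is in place, the weak contractibility follows as in \cref{prp:Eilenberg.swindle}: for the base point and for each $n \geq 1$, any section of $\sIP_{\loc}(X)$ over $S^n$ is smoothly homotopic to $\sfh$, hence $\pi_n(\sIP_{\loc}(X)) = 0$ for all $n$, so by \cref{thm:MW} the realization $\IP_{\loc}(X)$ is weakly contractible, and therefore so is $\bsIP_{\loc}(X) \simeq \sIP_{\loc}(X)$.
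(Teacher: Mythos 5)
Your overall strategy---an Eilenberg swindle through the flasque map, carried out with the truncated localized Kitaev pump---is the same as the paper's and is correct, but your explicit lattice construction has a gap. You define $\widetilde{\mathbbl{\Lambda}} = \bigsqcup_{n \geq 0} \big( (T^n)_*\mathbbl{\Lambda} \boxtimes (T^n)_*\check{\mathbbl{\Lambda}} \big)$ and assert that conditions (i) and (ii) of \cref{defn:localizing.path} are ``straightforward since the copies sit in disjoint internal-degree slabs.'' Condition (ii), however, is about the projection $\pr_{X \times \bR^{l_{\mathbbl{\Lambda}}}}$, which \emph{forgets} exactly the $\fR\times\bN^2$ coordinates you are using to keep the copies apart. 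After projecting, the sets $(T^n)_*\Lambda_s$ for distinct $n$ can accumulate: the flasque condition only says that $\Im(T_{X,1}^n)$ eventually misses each bounded set, with no uniform rate, so the number of $n$ for which $T_{X,1}^n(\bm{y})$ lands in a small ball around $\bm{w}$ need not be bounded as $\bm{w}$ varies. (On $X=\bR_{\geq 0}$ with $T_X(\bm{x},t)=\bm{x}+t/(1+\bm{x})$ this count grows linearly in $\bm{w}$.) Hence $\widetilde{\mathbbl{\Lambda}}$ is not in general weakly equi-uniformly discrete, and your section does not actually lie in $\sIP_{\loc}(X\midbar\sM)$.

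The correct move---already visible in the proof of \cref{prp:Eilenberg.swindle}, where the shear $\widetilde{T}_d$ preserves the $y$-coordinate---is to keep the level $n$ in a genuine extra $\bR$-coordinate rather than in $\bN$. Define $\widetilde{T}_X(\bm{x},v)\coloneqq \big(T_{X,v-\lfloor v\rfloor}\circ T_{X,1}^{\lfloor v\rfloor}(\bm{x}),\, v\big)$ on $\mathbf{C}X$ and set $\widetilde{\mathbbl{\Lambda}}\coloneqq\widetilde{T}_X\big(\mathbbl{\Lambda}^{[0,\infty)}\big)$, where $\mathbbl{\Lambda}^{[0,\infty)}$ is the lattice of the truncated localized Kitaev pump over $\mathbf{C}X$. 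Because $\widetilde{T}_X$ preserves the height $v$, this $\widetilde{\mathbbl{\Lambda}}$ lives in $X\times\bR^{l_{\mathbbl{\Lambda}}+1}\times\fR\times\bN$ with the level $\lfloor v\rfloor$ recorded in the extra $\bR$-factor, so condition (ii) is inherited from $\mathbbl{\Lambda}^{[0,\infty)}$, while linear properness of $T_X$ gives condition (iii). The null-homotopy of $\sfH$ is then simply $\widetilde{T}_{X,*}\big(\kappa_X^{\loc,R}(\sfH,\check{\sfH},\overline{\sfH})\big)$, and the localization-norm bookkeeping you rightly flagged as the main obstacle is taken care of by \cref{prp:localizing.induced.hom} rather than needing to be re-derived, because $\widetilde{T}_X$ is Lipschitz continuous.
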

\begin{proof}
The proof is given in the same way as \cref{prp:Eilenberg.swindle}. 
Let $\sM$ be a compact manifold and let $\sfH \in \sIP_{\loc}(X \midbar \sM)$ be a smooth family supported on $\mathbbl{\Lambda} \in \fL_X^{\loc}$. Let $\widetilde{T}_{X} \colon \mathbf{C} X  \to \mathbf{C} X$ defined by 
\begin{align*} 
    \widetilde{T}_X(\bm{x},v) = (T_{X,v-\lfloor v \rfloor } \circ T_{X,1}^{\lfloor v \rfloor}(\bm{x}) , v)
\end{align*}
and set
\begin{align*}
    \widetilde{\mathbbl{\Lambda}} \coloneqq \widetilde{T}_{X} \big(\mathbbl{\Lambda}^{[0,\infty)} \big)  \subset \big( X \times \bR^{l_{\mathbbl{\Lambda}}} \times \fR \times \bN \big) \times \bR \cong X \times  \times \bR^{l_{\mathbbl{\Lambda}}+1} \times \fR \times \bN.
\end{align*}
Then, by definition of $T_X$, the projection $\pr_X \colon \widetilde{\mathbbl{\Lambda}} \to X$ is linearly proper, and hence $\widetilde{\mathbbl{\Lambda}} \in \fL_X^{\loc}$. 
Now, the truncated version $\kappa_{X}^{\loc,R}$ of the localized Kitaev pump (\cref{defn:localizing.Kitaev.pump}) gives a localization flow 
\begin{align*}
    \widetilde{T}_{X,*} \big( \kappa_X^{\loc,R} (\sfH , \check{\sfH}, \overline{\sfH}) \big) \in \cP \sIP_{\loc}(X \midbar \sM )
\end{align*}
that gives a smooth homotopy of $\sfH$ and the trivial Hamiltonian. Since $\sfH$ was arbitrary, this shows that the group $\sIP_{\loc}[X \midbar \sM] $ is trivial.  
\end{proof}

For $X \in \ECW$, set $\mathbf{C}_L X = \{ (\bm{x},v) \in \boldsymbol{\Sigma}X \mid t \leq 0 \}$ and $\mathbf{C}_R X = \{ (\bm{x},v) \in \boldsymbol{\Sigma}X \mid t \geq 0 \}$. The latter space $\mathbf{C}_RX$ is identified with the cone $\mathbf{C}X$. Following \cref{subsection:bulk.boundary}, we define the following sheaves on $\Man$. 
\begin{itemize}
    \item Let $\bsIP_{\loc}(\boldsymbol{\Sigma}X \midbar \blank )_{R\star}$ denote the subsheaf of $\bsIP_{d+1}$ consisting of triples $(\sfH , \check{\sfH}, \overline{\sfH})$ such that the constant $K_{\loc,\mathbf{C}_RX,\bullet ,\nu,\mu}^{(k)}$ is finite for $\bullet=\sfH-\sfh,\check{\sfH}-\sfh, \overline{\sfH}-\sfh$ and for any $ f \in \cF$, $k \in \bN$ and relatively compact open chart $\sU$, where  
    \begin{align}
        K_{\loc,Y,\sfG,\nu,\mu}^{(k)}\coloneqq \sup_{s \in \bR_{\geq 1}} 
        \sup_{\bm{x} \in \Lambda_s \cap Y^c} 
        f_{\nu,\mu}(\mathrm{dist}_s(\bm{x}, Y )) \cdot 
        \| \sfG_{\bm{x}} (p,0\,; s)\|_{\sU,C^k}.
        \label{eqn:constant.K.local}
    \end{align}
    \item Let $\bsIP_{\loc}(\boldsymbol{\Sigma}X \midbar \blank )_{L}$ denote the subsheaf of $\bsIP_{d+1}$ consisting of smooth sections supported on $\mathbf{C}_LX$. 
    \item Let $\cS \bsIP_{\loc}(\boldsymbol{\Sigma}X)$ denote the subsheaf of $\cP \bsIP_{\loc}(\boldsymbol{\Sigma}X)$ whose section on $\sM \in \Man$ consists of smooth families of paths $\sfH \in \cP\bsIP_{\loc}(\boldsymbol{\Sigma}X\midbar \sM)$ such that $\ev_0 \sfH \in \bsIP_{\loc}(\boldsymbol{\Sigma}X\midbar \sM)_{R\star} $ and $\ev_1 \bsIP_{\loc}(\boldsymbol{\Sigma}X\midbar \sM)_L$. 
\end{itemize}

\begin{lem}\label{lem:switch.localization.flow}
    The following holds.
\begin{enumerate}
    \item The inclusion $i \colon \Omega \bsIP_{\loc}(\boldsymbol{\Sigma}X) \to \cS \bsIP_{\loc}(\boldsymbol{\Sigma}X) $ is a weak equivalence. 
    \item The inclusion $j_* \colon \bsIP_{\loc} (X ) \to \bsIP(\boldsymbol{\Sigma}X)_{L,R\star}$ is a weak equivalence.
    \item The inclusion $\iota \colon \bsIP_{\loc}(X)_{L,R\star} \to \cS \bsIP_{\loc}(\boldsymbol{\Sigma}X)$ is a weak equivalence. 
    \item The following diagram commutes: 
    \begin{align*}
    \xymatrix{
    \bsIP_{\loc}(X) \ar[rr]^{\boldsymbol{\kappa}_d^{\loc}} \ar[rd]_{\simeq}^{\iota \circ j_*} && \Omega \bsIP_{\loc} (\boldsymbol{\Sigma}X )\ar[ld]^{\simeq}_{i} \\
    &\cS \bsIP_{\loc}(\boldsymbol{\Sigma}X).&
    }
    \end{align*}
\end{enumerate}
\end{lem}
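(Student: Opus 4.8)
\textbf{Proof proposal for \cref{lem:switch.localization.flow}.}

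The plan is to carry out the localized analogue of the argument already executed in \cref{subsection:bulk.boundary} for \cref{cor:spectrum}, replacing each ingredient by its localization-flow counterpart. The four claims should be proved in the following order. \emph{Claim (1)} follows exactly as \cref{lem:switch.loop.equivalence}: the morphism $(\ev_0,\ev_1)\colon \cS\bsIP_{\loc}(\boldsymbol{\Sigma}X) \to \bsIP_{\loc}(\boldsymbol{\Sigma}X)_{R\star} \times \bsIP_{\loc}(\boldsymbol{\Sigma}X)_L$ is a fibration in the sense of page \pageref{paragraph:fibration}, so by the long exact sequence of \cref{lem:survey.homotopy} (2) it suffices to prove that both $\bsIP_{\loc}(\boldsymbol{\Sigma}X)_L$ and $\bsIP_{\loc}(\boldsymbol{\Sigma}X)_{R\star}$ are weakly contractible. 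The first is weakly contractible because $\mathbf{C}_L X$ is flasque (by $T_{(\bm{x},v)\mapsto (\bm{x},v-1)}$, which is linearly proper Lipschitz), and hence \cref{prp:flasque} applies. For the second, first note that the same argument as \cref{prp:flasque} — applied to the conical subspace $Z_{L,r}^c$-analogue inside $\boldsymbol{\Sigma}X$, which is again flasque — shows that the subsheaf of sections supported on that conical region is weakly contractible; then, by the localization-flow version of \cref{lem:cut.diffused} (whose hypotheses are met precisely because the constants $K_{\loc,\mathbf{C}_R X,\bullet,\nu,\mu}^{(k)}$ in \eqref{eqn:constant.K.local} are finite), any section of $\bsIP_{\loc}(\boldsymbol{\Sigma}X)_{R\star}$ is smoothly homotopic into that region.

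For \emph{Claim (2)}, I would repeat the proof of \cref{lem:weak.equivalence.stick} verbatim in the localization-flow setting. Given $\bsfH \in \bsIP_{\loc}(\boldsymbol{\Sigma}X)_{L,R\star}$ supported on $\mathbbl{\Lambda}$, the localization-flow \cref{lem:cut.diffused} produces, for $r \gg 0$, a smooth homotopy of $\bsfH$ with $\Theta_{Z_{L,r}^c}(\bsfH)$, which is supported on the conical region $Z_{L,r}^c \cap \mathbf{C}_L X$. Under the identification of Euclidean spaces this region, viewed as a family in $\fL_X^{\loc}$, carries the data of a section of $\bsIP_{\loc}(X)$, and \cref{cor:localizing.coarse.homotopy} applied to the scaling-type Lipschitz homotopy of \cref{exmp:Lipschitz.homotopy} (2) identifies $j_*$ of this section with $\Theta_{Z_{L,r}^c}(\bsfH)$ up to homotopy. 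Injectivity is handled by the same $\Theta$-cutoff applied to a null-homotopy. For \emph{Claim (3)}, I would mimic \cref{lem:switch.constant.equivalence}: applying \cref{thm:interpolation.loop} with $Y = \mathbf{C}_R X$ gives the adiabatic-interpolation morphism $\vartheta\colon \cS\bsIP_{\loc}(\boldsymbol{\Sigma}X) \to \bsIP_{\loc}(\boldsymbol{\Sigma}X)_{L,R\star}$; one checks $\tilde\vartheta\sfH \in \bsIP_{\loc}(\boldsymbol{\Sigma}X)_{R\star}$ using the decay estimate \eqref{eqn:theta.decay} together with \cref{rmk:cutoff.approximate} and \cref{lem:LGA.cone.decomposition} — here the crucial point is that \cref{lem:auto.localizing.path} guarantees $\sfG_{\sfH}$ is a genuine localization-flow derivation and the constants in \cref{prp:Lieb.Robinson.Ck} are uniform in $s$ because they depend only on almost-local norms of $\mathtt{ev}_s\sfG$ with respect to $\rmd_s$. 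Then $\vartheta\circ \iota = \id$ and the three explicit smooth homotopies of \cref{lem:switch.constant.equivalence} (the $s$-reparametrization of $\tilde\vartheta\sfH$, the convex combination since $\tilde\vartheta\sfH(p,1)$ is supported on $\mathbf{C}_L X$, and the unwinding of LG automorphisms) go through unchanged. Finally, \emph{Claim (4)} is proved by writing down the homotopy $\widetilde{\bsfH}$ exactly as in \eqref{eqn:pump.constant.homotopy}, now using the spatially-decomposed pump data $\sfH^{[0,1)}$, $\check{\sfH}^{[1,2)}$, $\overline{\sfH}^{[0,2)}$ of \cref{lem:spatial.localizing.path} in place of $\boldsymbol{\kappa}_d(\bsfH)$, which is the reason \cref{lem:spatial.localizing.path} was set up with the self-similar refinement over $s$.

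The main obstacle will be verifying that the analytic inputs of \cref{section:analysis.spin} apply \emph{uniformly in the localization parameter} $s \in [1,\infty)$. Concretely: (a) the localization-flow versions of \cref{lem:cut.diffused} and \cref{thm:interpolation.loop} must be established, which requires that the bounds of \cref{prp:Lieb.Robinson.Ck}, \cref{lem:LGA.cone.decomposition}, and \cref{lem:inner.derivation} hold with constants independent of $s$ when distances are measured in $\rmd_s$; this is exactly what \cref{lem:auto.localizing.path} (via \cref{lem:equi.polynomial.growth.inherit.2} and \cref{exmp:lattice.Rl}) is designed to supply, but one must carefully track that the brick growth rate $\kappa_{\bB}(1+sr)^{2l_X+2l_{\mathbbl{\Lambda}}}$ and the $F$-function constants absorb the $s$-dependence correctly. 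And (b) one must confirm that the cutoff $\Theta$ and adiabatic interpolation $\tilde\vartheta$ send localization flows to localization flows, i.e.\ preserve finiteness of the norms \eqref{eqn:localizing.norm}; this follows from \cref{rmk:cutoff.approximate} and \cref{lem:LGA.cone.decomposition} since those decay estimates are stated with constants $\Upsilon^{\star,(k)}$ that, after rescaling, only depend on almost-local norms — but spelling this out is the place where genuine care is needed rather than pure analogy.
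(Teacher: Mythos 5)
Your proposal matches the paper's (extremely terse) proof: the paper simply states that the arguments of \cref{lem:weak.equivalence.stick,lem:switch.constant.equivalence,lem:switch.loop.equivalence,lem:switch.diagram.commute} carry over, and your elaboration is precisely the intended unpacking, including the correct identification that \cref{prp:flasque} replaces \cref{prp:Eilenberg.swindle}, that \cref{lem:auto.localizing.path} and \cref{lem:equi.polynomial.growth.inherit.2} supply the $s$-uniform Lieb--Robinson input, and that \cref{lem:spatial.localizing.path} exists precisely to feed the localized analogue of the sliding homotopy \eqref{eqn:pump.constant.homotopy}.

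One small imprecision worth correcting: in the localized setting the truncation region for claims (1) and (2) is not a conical region $Z_{L,r}^c$ but the slab $X\times[-r,\infty)$ (resp.\ $X\times[-r,0]$) inside $\boldsymbol{\Sigma}X$, since the decay of $\sfH-\sfh$ is measured in the suspension coordinate via the $K_{\loc}$-constant rather than in all of $\bR^{d+1}$; this slab is flasque (indeed it is a coarse translate of $\mathbf{C}_R X=\mathbf{C}X$) and after the cutoff the support is Lipschitz-homotopic to $X\times\{0\}$, so you do not need the conical geometry that the Euclidean proof used to regain $d$-dimensionality. Your two stated ``obstacles'' (a) and (b) are exactly the points the paper silently relies on; it would be cleanest to state the localization-flow versions of \cref{lem:cut.diffused} and \cref{thm:interpolation.loop} once as lemmas rather than invoking them by analogy each time, since the $s$-uniform control is used again in \cref{prp:excision} and \cref{prp:covering.homology.theory}.
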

\begin{proof}
    The proofs are the same as \cref{lem:weak.equivalence.stick,lem:switch.constant.equivalence,lem:switch.loop.equivalence,lem:switch.diagram.commute}. 
\end{proof}

\begin{thm}\label{cor:spectrum.localizing}
    There localizing Kitaev pump $\boldsymbol{\kappa}_{X}^\loc \colon \bsIP_{\loc}(X) \to \Omega \bsIP_{\loc} (\boldsymbol{\Sigma} X)$ is a weak equivalence. The sequence $\{ \bsIP_{\loc ,d}(X), \boldsymbol{\kappa}_{\boldsymbol{\Sigma}^dX}^\loc \}$ forms an $\Omega$-spectrum object in $\Sh (\Man)$.
\end{thm}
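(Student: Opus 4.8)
\textbf{Proof plan for \cref{cor:spectrum.localizing}.}
The plan is to assemble the statement from the pieces established in \cref{lem:switch.localization.flow}, exactly mirroring the argument for \cref{cor:spectrum} in the non-localized case. First I would prove that $\boldsymbol{\kappa}_X^{\loc}$ is a weak equivalence. By \cref{lem:switch.localization.flow} (4), the triangle with vertices $\bsIP_{\loc}(X)$, $\Omega\bsIP_{\loc}(\boldsymbol{\Sigma}X)$, and $\cS\bsIP_{\loc}(\boldsymbol{\Sigma}X)$ commutes, with the two slanted morphisms $\iota\circ j_*$ and $i$ being weak equivalences by parts (1), (2), (3) of the same lemma (note $j_*$ lands in $\bsIP_{\loc}(X)_{L,R\star}$ which coincides with $\bsIP(\boldsymbol{\Sigma}X)_{L,R\star}$ after identifying $X$ with a closed subspace of $\boldsymbol{\Sigma}X$, so $\iota\circ j_*$ makes sense as a composite of weak equivalences). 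By the two-out-of-three property of weak equivalences in $\Sh(\Man)$ (which follows from \cref{thm:MW} reducing weak equivalence to isomorphisms of homotopy groups), $\boldsymbol{\kappa}_X^{\loc} = i^{-1}\circ(\iota\circ j_*)$ is a weak equivalence. Strictly, since $i$ need not be invertible on the nose I would phrase this via the long exact sequences of homotopy groups: $\pi_n(\iota\circ j_*)$ and $\pi_n(i)$ are isomorphisms for all $n$, hence so is $\pi_n(\boldsymbol{\kappa}_X^{\loc})$.

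Next I would upgrade this pointwise statement to the assertion that $\{\bsIP_{\loc,d}(X),\boldsymbol{\kappa}_{\boldsymbol{\Sigma}^dX}^{\loc}\}$ is an $\Omega$-spectrum object in $\Sh(\Man)$. By \cref{defn:localizing.path} (4), $\bsIP_{\loc,d}(X)=\bsIP_{\loc}(\boldsymbol{\Sigma}^dX)$, so applying the first part with $X$ replaced by $\boldsymbol{\Sigma}^dX$ (which again lies in $\ECW$ by \cref{defn:cone.suspension}, since $\boldsymbol{\Sigma}(\boldsymbol{\Sigma}^dX)=\boldsymbol{\Sigma}^{d+1}X$ is a CW-complex with a proper metric admitting a bi-Lipschitz embedding into a Euclidean space, inheriting the cellular and metric conditions of \cref{defn:MCW}) gives that each structure morphism
\[
    \boldsymbol{\kappa}_{\boldsymbol{\Sigma}^dX}^{\loc}\colon \bsIP_{\loc}(\boldsymbol{\Sigma}^dX)\to \Omega\bsIP_{\loc}(\boldsymbol{\Sigma}^{d+1}X)
\]
is a weak equivalence of sheaves on $\Man$. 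This is precisely the definition of an $\Omega$-spectrum object. I would also remark, as in the non-localized case, that one may replace $\bsIP_{\loc,d}(X)$ by $\sIP_{\loc,d}(X)$ via the forgetful weak equivalence $\mathsf{fg}$ of \cref{lem:bold.invertible.sheaf} (applied to $\sF=\sGP_{\loc}(\boldsymbol{\Sigma}^dX)$, which is a local commutative H-monoid with weakly strict unit by \cref{defn:gapped.Hamiltonian.X} (3) and its analogue for localization flows), so that the associated realizations $\{\IP_{\loc,d}(X)\}$ form a $\CW$-$\Omega$-spectrum.

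The routine verifications I am invoking — that $\boldsymbol{\Sigma}^dX\in\ECW$, that the proofs of \cref{lem:switch.localization.flow} (1)–(4) carry over verbatim from \cref{lem:weak.equivalence.stick,lem:switch.constant.equivalence,lem:switch.loop.equivalence,lem:switch.diagram.commute}, and that the H-monoid structure behaves well — are already asserted in the excerpt, so the only genuine content here is the two-out-of-three bookkeeping. Consequently the main obstacle is not in this corollary at all but upstream, in \cref{lem:switch.localization.flow}: specifically the analogue of \cref{lem:switch.constant.equivalence}, which requires the adiabatic interpolation morphism $\vartheta_d$ to be well-defined on localization flows. This needs the estimates of \cref{lem:auto.localizing.path} together with the uniformity of the constants $\Upsilon_{(s),\sfG,\nu,\mu}$ over $s\in[1,\infty)$ — i.e.\ that \cref{prp:Lieb.Robinson} and \cref{thm:interpolation.loop} hold with bounds independent of the rescaling parameter, which is exactly what the equi-polynomial growth hypothesis of \cref{defn:equi.coarse} and \cref{lem:equi.polynomial.growth.inherit.2} are designed to supply. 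Once that uniformity is in hand the present corollary is immediate.
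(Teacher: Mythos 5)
Your proof is correct and takes essentially the same route the paper does: the statement is a direct consequence of \cref{lem:switch.localization.flow}, where the four morphisms fit into a commuting triangle whose slanted sides are weak equivalences, so two-out-of-three forces $\boldsymbol{\kappa}_X^\loc$ to be one as well, and the $\Omega$-spectrum claim follows by substituting $\boldsymbol{\Sigma}^dX$ for $X$. Your diagnosis that the real work lies upstream in \cref{lem:switch.localization.flow} (in particular in the analogue of \cref{lem:switch.constant.equivalence}, whose uniform-in-$s$ estimates rest on \cref{lem:auto.localizing.path} and the equi-polynomial growth hypothesis) is also accurate.
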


At the end of this subsection, we introduce another fundamental property of the sheaf $\sIP_{\loc}$ derived from the coarse homotopy theory.

\begin{thm}\label{thm:scaleable}
Let $X \in \ECW$ be a scaleable metric space. Then, the morphism of sheaves 
\begin{align*} 
    \mathtt{ev}_1 \colon \sIP_{\loc}(X) \to \sIP(X) , \quad \mathtt{ev}_1(\sfH(p \,; s)) = \sfH(p\,; 1),
\end{align*}
is a weak equivalence. 
\end{thm}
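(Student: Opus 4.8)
The plan is to show that $\mathtt{ev}_1$ is a weak equivalence by constructing a homotopy inverse using the scaling structure on $X$, and the key is to verify that the rescaling operation on Hamiltonians is compatible with the localization $C^k$-norms in \eqref{eqn:localizing.norm}. Concretely, for a scaleable $X$ with contracting map $S_X \colon X \to X$ (so $\rmd(S_X(\bm{x}),S_X(\bm{y})) \leq \rmd(\bm{x},\bm{y})/2$) and coarse homotopy $\widetilde{S}_X \colon \mathbf{I}_\varphi X \to X$ connecting $\id_X$ and $S_X$, I would first define, for an IG UAL Hamiltonian $\sfH$ on $X$ supported on $\Lambda \in \fL_X$, a localization flow $\Xi(\sfH)$ whose value at parameter $s$ is $(S_X^{\lfloor \log_2 s \rfloor})_*(\sfH)$ rescaled appropriately, so that its range with respect to $\rmd_s$ decays in $s$. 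The essential point is that pushing $\sfH$ forward along the $k$-fold iterate $S_X^k$ shrinks its interaction range by a factor $2^k$, while $\mathtt{ev}_1 \circ \Xi(\sfH) = \sfH$ up to a smooth homotopy given by \cref{lem:flip.homotopy.use} (since $S_X^0 = \id$ and maps close to the identity induce homotopic morphisms). This requires the careful bookkeeping of the extra $\bN$ internal degree of freedom reinterpreted as $\bN^2$ in \cref{defn:localizing.path}(1), which is exactly the slot used to absorb the iterated push-forwards.

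The second half is the reverse composite: given a localization flow $\sfH(p\,;s)$, one must show $\Xi(\mathtt{ev}_1 \sfH)$ is smoothly homotopic to $\sfH$. Here I would use the coarse homotopy $\widetilde{S}_X$ to build a path of Lipschitz maps from $\id_X$ to $S_X$, lift it (via \cref{cor:localizing.coarse.homotopy}) to a smooth homotopy of morphisms $\id \simeq (S_X)_*$ on $\sIP_{\loc}(X)$, and then iterate; the key analytic input is that the localization norms of $\sfH(p\,;s)$ already decay in $s$, so precomposing the parameter with a suitable reparametrization (cf.\ \cref{rmk:parameter.shift}) and interleaving with the pushed-forward copies interpolates between $\sfH$ and $\Xi(\mathtt{ev}_1\sfH)$ without destroying the uniform bound \eqref{eqn:localizing.norm}. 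One should organize this as a telescoping/Eilenberg-swindle-type argument: both $\sfH$ and $\Xi(\mathtt{ev}_1\sfH)$ become homotopic to the infinite composite $\mathtt{ev}_1\sfH \boxtimes (S_X)_*\mathtt{ev}_1\sfH \boxtimes (S_X^2)_*\mathtt{ev}_1\sfH \boxtimes \cdots$ (which converges because the supports become increasingly fine, hence disjoint in the $\bN^2$-direction), after using \cref{lem:cut.diffused} to deform the genuine flow $\sfH(p\,;s)$ onto its truncations. This is formally parallel to the proof that the coarse assembly map for the localization algebra is an isomorphism for scaleable spaces, mentioned in the introduction.

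The main obstacle I anticipate is the analytic verification that the rescaled push-forwards $(S_X^k)_*(\sfH)$, when assembled into a single flow indexed by $s \in [1,\infty)$ with the interface regions (where consecutive $S_X^k$ and $S_X^{k+1}$ pieces are glued via the homotopy $\widetilde{S}_X$) handled simultaneously for all $s$, actually satisfy the \emph{equi}-linear properness (\cref{defn:equi.coarse}(5)) and the finiteness of $\vvert\blank\vvert_{\loc,\sU,C^k,f}$ uniformly in $s$. The push-forward estimate \eqref{eqn:f-norm.push} gives, for a Lipschitz map with constant $\alpha$, a bound $\vvert F_*\sfH\vvert_{(s),\cdot} \leq 2c_{0,\mu,\alpha}\vvert\sfH\vvert_{(s),\cdot}$ independent of $s$ (as noted in the proof of \cref{prp:localizing.induced.hom}), but the iterated contraction $S_X^k$ has Lipschitz constant controlled by that of $S_X$ uniformly, so the constants do not blow up; the subtlety is rather that the homotopy $\widetilde{S}_X$ introduces a region of "thickness" $\varphi(\rmd(\bm{x},\bm{x}_0))$ which must be placed so that after rescaling by $\rmd_s$ it still has the right growth — this is where the linear (as opposed to merely bounded) control in the definition of large-scale Lipschitz homotopy and of $\ECW$ is used, and I expect the proof that this data fits together into an element of $\fL_X^{\loc}$ to be the most delicate computation. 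Once that bookkeeping is in place, the homotopies are all of types already constructed (convex combinations, $\mathtt{FL}$ from \cref{lem:lem:flip.middle.path}, flip homotopies, and parameter reparametrizations), so no new ideas beyond \cref{section:analysis.spin} and \cref{section:Kitaev} are needed.
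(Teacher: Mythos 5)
Your construction of $\Xi$ (the candidate preimage, $\Xi(\sfH)(p\,;s) \approx (S_X^{k})_*\sfH$ with $k \sim \lfloor s\rfloor$, glued across integer parameters via $\widetilde S_{X,*}$) is essentially the paper's proof that $\mathtt{ev}_1$ is surjective, and your analytic bookkeeping for this half (the $s$-uniformity of \eqref{eqn:f-norm.push}, the linear control demanded by $\ECW$) is correctly identified.

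The gap is in the second half. You aim to show $\Xi \circ \mathtt{ev}_1 \simeq \id$ directly, and you claim that both $\sfH$ and $\Xi(\mathtt{ev}_1\sfH)$ become homotopic to an infinite composite of the form $\mathtt{ev}_1\sfH \boxtimes (S_X)_*\mathtt{ev}_1\sfH \boxtimes \cdots$, truncated to finitely many factors at each $s$. This does not work as stated, for two reasons. First, the proposed infinite composite is not a well-defined localization flow: when a new factor $(S_X^k)_*\mathtt{ev}_1\sfH$ is inserted at $s=k+1$, it is in general not the trivial Hamiltonian, so the flow would be discontinuous in $s$; you would need to supply a smooth insertion, which amounts to a null-homotopy of $\mathtt{ev}_1\sfH$ that you cannot assume. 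Second, even if it were well-defined, there is no mechanism offered for why a \emph{general} flow $\sfH(p\,;s)$ — whose value for $s>1$ is a priori unrelated to $\mathtt{ev}_1\sfH$ — should be smoothly homotopic to something built only from push-forwards of $\mathtt{ev}_1\sfH$. That claim is essentially equivalent to what you are trying to prove, so the argument is circular; the appeal to \cref{lem:cut.diffused} does not help, since that lemma truncates a Hamiltonian in space, not a flow in the scale parameter.

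The paper avoids this by not constructing a two-sided homotopy inverse at all. It proves three independent facts: $\mathtt{ev}_1$ is a fibration of sheaves (by a lift-and-extend argument), $\mathtt{ev}_1$ is surjective (your $\Xi$), and the fiber $\mathtt{ev}_1^{-1}(\{\sfh\})$ is weakly contractible. The contractibility is where the Eilenberg swindle lives — but it is a \emph{parameter-shift} swindle, $\widetilde\sfH(p\,;s) \coloneqq \sfH(p\,;s)\boxtimes\sfH(p\,;s-1)\boxtimes\cdots\boxtimes\sfH(p\,;s-\lfloor s\rfloor)$, together with $\sfH\boxtimes\widetilde\sfH \simeq \sfH\boxtimes\sigma_1\widetilde\sfH \simeq \widetilde\sfH$, and not a push-forward-by-$S_X$ swindle. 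Crucially, that construction only produces a continuous flow \emph{because} $\mathtt{ev}_1\sfH=\sfh$: the factor $\sfH(p\,;s-\lfloor s\rfloor)$ entering at each integer is $\sfh$, so insertions are trivial. This is exactly the structural role of the fibration: it reduces the problem to a situation where the swindle has the trivial-insertion property it needs. Without recognizing that $\mathtt{ev}_1$ is a fibration, you cannot restrict to the fiber, and the swindle you sketch does not have this property.
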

This theorem applies, not only to $X= \bR^d$, but also to the metric cones $X=\cO Y$ for a closed subspace $Y \subset S^n$. Such a class of metric spaces plays a fundamental role in the recent developments of this research field such as \cites{eloklUniversalCoarseGeometry2024,artymowiczMathematicalTheoryTopological2024}. 
\begin{proof}
We first show that $\mathtt{ev}_1$ is a fibration. Let $\sfH \in \sIP_\loc(X \midbar \sM)$ and $\widetilde{\sfH} \in \cP \sIP(X \midbar \sM )$ such that $ \sfH (p\,;1 )  = \widetilde{\sfH} (p,0)$ for any $p \in \sM$. Then, they are glued to a family on $\sM \times  ([1,\infty) \times \{ 0\} \cup \{1\} \times [0,1] ) $, which extends to a smooth family 
\begin{align*}
    \widetilde{\sfH} (p,t \, ;s) \coloneqq 
    \begin{cases}
        \widetilde{\sfH} (p,t+1-s) & \text{ if $1 \leq s \leq t+1$}\\
        \sfH (p \,; s-t) & \text{ if $s \geq t+1$,}
    \end{cases}
\end{align*}
which defines a smooth path $ \widetilde{\sfH} \in \cP \sIP_{\loc}(X \midbar \sM)$.

Next, we show that $\mathtt{ev}_1$ is surjective.
By \cref{exmp:Lipschitz.homotopy} and \cref{cor:localizing.coarse.homotopy}, the coarse homotopy $\widetilde{S}_X$ in \cref{defn:coarse.flasque.scaleable} induces a homotopy
\begin{align*}
    \widetilde{S}_{X,*} \colon \sIP(X) \to \cP\sIP(X)
\end{align*}
such that $\ev_0 \circ \widetilde{S}_{X,*}=\id$ and $\ev_1 \circ \widetilde{S}_{X,*} = S_{X,*}$. 
Thus, for $\bsfH \in \bsIP(X \midbar \sM)$, the smooth family 
\begin{align*}
    \widetilde{\bsfH}(p \, ;s) \coloneqq S_{X,*}^{\lfloor s\rfloor -1}  \circ \big(\ev_{s-\lfloor s \rfloor} \widetilde{S}_{X,*} \big) (\bsfH (p)) 
\end{align*}
is a localization flow such that $\mathtt{ev}_1(\overline{\sfH}) =\sfH $ as desired.

Finally, we show that the fiber $\sIP_{\loc ,\circ}(X) \coloneqq \mathtt{ev}_1^{-1}(\{ \sfh \})$ is weakly contractible. 
This follows from the analogous Eilenberg swindle argument developed in coarse C*-algebra K-theory by Qiao--Roe~\cite{qiaoLocalizationAlgebraGuoliang2010}. 
Let $\mathbbl{\Lambda} \in \fL_X^{\loc}$ and let $\sfH \in \sIP_{\loc , \circ}(X \midbar \sM)$ be a smooth family of localization flows supported on $\mathbbl{\Lambda}$. 
Let us fix a bijection of the first component $\bN$ in the direct product $X \times \bR^\infty \times \fR \times \bN \times \bN$ with the disjoint union $\bN^{\sqcup \bN}$ so that $X \times \bR^{l_{\mathbbl{\Lambda}}} \times \fR \times \bN^2$ is decomposed to countable copies. 
By using the homotopy \cref{lem:flip.homotopy.use}, we may assume that $\mathbbl{\Lambda}$ is contained in the first copy.  
We define $\widetilde{\mathbbl{\Lambda}} \coloneqq \{ \widetilde{\Lambda}_s \}_{s \in [1,\infty)} \in \fL_X^{\loc}$ so that $\widetilde{\Lambda}_s$ is the disjoint union $\Lambda_s \sqcup S_{X}\Lambda_{s-1} \sqcup \cdots \sqcup S_{X}^{\lfloor s \rfloor}\Lambda_{s-\lfloor s \rfloor } $, whose $i$-th component is stored in the $i$-th component of $\bN^{\sqcup \bN}$. Then, the smooth family 
\begin{align*}
    \widetilde{\sfH}(p\,;s) \coloneqq \sfH(p\,;s) \boxtimes S_{X,*}\sfH(p\,;s -1)  \boxtimes S_{X,*}^2\sfH(p\,;s-2) \boxtimes \cdots \boxtimes S_{X,*}^{\lfloor s \rfloor}\sfH(p\,;s-\lfloor s \rfloor ) \in \sIP_{\loc}(\widetilde{\mathbbl{\Lambda}}\midbar \sM )
\end{align*}
defines a smooth family of localization flows. Again, by using the homotopy \cref{lem:flip.homotopy.use}, we obtain smooth homotopies
\begin{align*}
    \sfH \boxtimes \widetilde{\sfH} \simeq \sfH \boxtimes \sigma_1S_{X,*}\widetilde{\sfH} \simeq \widetilde{\sfH},
\end{align*}
which implies $[\sfH] = 0 \in \sIP_{\loc,\circ}[X \midbar \sM]$. Since $\sM$ and $\sfH$ were arbitrary, this shows the weak contractibility of the sheaf $\sIP_{\loc,\circ} (X)$.
\end{proof}

\begin{exmp}\label{exmp:models}
Examples of IG localization flows are listed below. Examples (2) and (3) will be discussed in detail in \cite{kubotaStableHomotopyTheory2025b}. 
\begin{enumerate}
    \item The $1$-dimensional Fidkowski--Kitaev modeln in \cref{exmp:FK,exmp:equivariant.FK} lifts to a localization flow. Indeed, there is a smooth homotopy 
    \[
    \Omega_{12} \otimes \Omega_{36} \otimes \Omega_{45} \simeq \Omega_{12} \otimes \Omega_{34} \otimes \Omega_{56} \in \cB_{\bm{x}-1} \hotimes (\hat{\cB}_{\bm{x}} \hotimes \cB_{\bm{x}} \hotimes  \hat{\cB}_{\bm{x}} \hotimes \cB_{\bm{x}}) \hotimes \hat{\cB}_{\bm{x}+1} 
    \]
    that connects $\sfH \boxtimes \sfh$ with the $\sfH$ placed on $\Lambda_2 = \frac{1}{2}\bZ$ with an appropriate rescaling. By iterating this homotopy in a self-similar way, we obtain an IG localization flow.
    \item The lattice Dijkgraaf--Witten model (\cref{cor:equivariant.DW.cohomology}) is ultimately generalized to a morphism of cohomology functors
    \[
    \mathrm{DW}_X \colon \mathrm{HH}^2(X \midbar \sM \,; \bZ) \to \rIP_{\loc}(X \midbar \sM)
    \]
    for any $X \in \ECW$. The original Dijkgraaf--Witten model corresponds to the case that $X=\bR^d$ and $\sM=BG$. Here, the domain is the bivariant homology group 
\[  
    \mathrm{HH}^{n}(X \midbar \sM \,; \bZ) \coloneqq \colim_{k \to \infty}[\Sigma^k \sM_+, X_+ \wedge K(\bZ, n+k)].
\]
    We briefly explain the mechanism by which the localization flow emerges.  
    Our Dijkgraaf--Witten model uses as input a double complex formed from the chain complex of the Vietoris--Rips complex $\mathrm{Rips}_r(X)$ of $X$ together with the \v{C}ech complex of $\sM$. The parameter $r>0$ plays the role of the interaction range in the resulting Hamiltonian. 
    Any ordinary chain $\sigma$ in $X$ can be refined, through repeated barycentric subdivision, to a chain $\sigma_r$ in $\mathrm{Rips}_r(X)$ for arbitrarily small values of $r>0$. When $r'<r$, a coboundary between $\sigma_r$ and $\sigma_{r'}$ on $\mathrm{Rips}_r(X)$ produces a smooth homotopy between the corresponding Hamiltonians. Taken together, these homotopies assemble into a localization flow.
    \item The free-fermionic analogue of the localization flow is defined similarly, just by replacing tensor products in the definitions to direct sums. It is almost the same as the K-homology group of $X$ realized by the localization algebra of G.\ Yu \cite{yuLocalizationAlgebrasCoarse1997}. Araki's quasi-free second quantization (cf.\ \cref{rmk:KO}) gives a morphism of bivariant cohomology functors
        \[
        \mathrm{Q} \colon \mathrm{KKO}^2(X \midbar \sM) \to \rfIP_{\loc}(X \midbar \sM),
        \]
        where the left hand side denotes Kasparov's KK-group \cite{kasparovOperatorFunctorExtensions1980}.
    For example, suppose $X$ is an $n$-dimensional complete spin manifold. In this case, its Dirac operator represents the K-theoretic fundamental class $[D_X] \in \KO_n(X) \cong \mathrm{KKO}^{-n}(X \midbar \pt)$, which is also represented in a non-trivial way by a free-fermionic localization flow. 
    Its quasi-free second quantization gives an IG localization flow of fermionic UAL Hamiltonians. 
\end{enumerate}    
\end{exmp}

\subsection{Generalized homology theory}\label{subsection:generalized.cohomology}
Let $X \in \ECW$ and let $A \leq X$ be a closed subcomplex. We write 
\begin{align*}
    \mathbf{C}(X,A) \coloneqq {}&{} X \cup \mathbf{C} A \subset \mathbf{C} X \subset \boldsymbol{\Sigma}X. 
\end{align*}
Note that $\mathbf{C}(X,A)$ is also contained in $\ECW $ via the inclusion $\mathbf{C}(X,A) \subset \mathbf{C} X \subset \mathbf{C} \bR^{l_X} \subset \bR^{l_X+1}$.
We write $\ECW^2 $ for the category of pairs $(X,A)$ of CW complexes such that $X \in \ECW$ and Lipschitz continuous maps of pairs. 
For $(X,A) \in \ECW^2$, its suspension $(\boldsymbol{\Sigma}X, \boldsymbol{\Sigma} A) $ is also contained in the class $\ECW^2$, which gives a functor $\boldsymbol{\Sigma} \colon \ECW^2 \to \ECW^2$. 

\begin{defn}
    For $(X,A) \in \ECW^2$, the sheaves $\sIP_{\loc}(X,A) = \sIP_{\loc}(X,A  \midbar \blank) $ and $\bsIP_{\loc}(X,A) = \bsIP_{\loc}(X,A \midbar \blank) $ are defined by
    \[
    \sIP_{\loc}(X,A  \midbar \sM) \coloneqq \sIP_{\loc}(\mathbf{C}(X,A)  \midbar \sM), \quad \bsIP_{\loc}(X,A  \midbar \sM) \coloneqq \bsIP_{\loc}(\mathbf{C}(X,A)  \midbar \sM) .
    \]
\end{defn}

The sequence of sheaves $\bsIP_{\loc ,n}(X,A) \coloneqq \bsIP_{\loc }(\boldsymbol{\Sigma}^nX,\boldsymbol{\Sigma}^n A)$ forms an $\Omega$-spectrum object via the localized Kitaev pump 
\begin{align}
    \boldsymbol{\kappa}_{X,A}^{\loc} \coloneqq \boldsymbol{\kappa}_{\mathbf{C}(X,A)}^\loc \colon \bsIP_{\loc}(X,A) \to \Omega \bsIP_{\loc}(\boldsymbol{\Sigma}X,\boldsymbol{\Sigma} A). \label{eqn:relative.localizing.pump}
\end{align}
As well as \cref{prp:localizing.induced.hom}, a Lipschitz continuous map of CW-pairs $F \colon (X,A) \to (Y,B)$ of $\ECW^2$ induces 
\begin{align*} 
    F_* \colon \bsIP_{\loc}(X,A) \to \bsIP_{\loc}(Y,B). 
\end{align*}
Moreover, by using \eqref{eqn:relative.localizing.pump}, it is proved in the same way as \cref{cor:localizing.coarse.homotopy} that if there is a Lipschitz homotopy of $F_0$ and $F_1$, then $F_{0,*}$ and $F_{1,*}$ are smoothly homotopic. 

\begin{thm}\label{thm:IP.bivariant}
The bifunctor $\rIP_{\loc}^d \colon \ECW^2 \times \CW_* \to \mathsf{Ab}$ given by
\begin{align*}
    \rIP_{\loc}^d(X,A \midbar \sY) = \rIP_{\loc, -d}(X,A \midbar \sY) \coloneqq {}&{}  [\Sigma^n \sY,\IP_{\loc,d+n}(X,A)],
\end{align*}
which does not depend on $n \in \bZ$ such that $n+d \geq 0$, forms a bivariant homology theory, which is finitely additive in the first variable. 
\end{thm}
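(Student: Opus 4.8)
## Proof Plan

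The strategy is to verify the Eilenberg--Steenrod axioms for generalized homology theory (in their bivariant form) one by one, drawing on the structural results already established for the sheaves $\sIP_{\loc}$. The bifunctoriality in both variables has already been noted: the first variable receives functorial (up to homotopy) maps from Lipschitz continuous maps of CW-pairs in $\ECW^2$ via $F_*$, and the second variable is cohomological by construction since $\rIP_{\loc}^d(X,A \midbar \sY) = [\Sigma^n\sY, \IP_{\loc,d+n}(X,A)]$ is a homotopy set into a space in an $\Omega$-spectrum. The independence of $n$ is immediate from the fact that $\{\bsIP_{\loc,n}(X,A), \boldsymbol{\kappa}^{\loc}_{\boldsymbol{\Sigma}^n X}\}$ is an $\Omega$-spectrum object in $\Sh(\Man)$ (combine \cref{cor:spectrum.localizing} with \eqref{eqn:relative.localizing.pump}), together with \cref{thm:MW}.

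The plan is to check the remaining axioms in the following order. First, \emph{homotopy invariance} in the first variable: this is exactly the statement already recorded just before \cref{lem:IP.homology.longexact}, namely that a Lipschitz homotopy between $F_0, F_1 \colon (X,A) \to (Y,B)$ induces smoothly homotopic morphisms $F_{0,*} \simeq F_{1,*}$, which after applying $|\Sing(\blank)|$ and $[\Sigma^n\sY, \blank]$ gives equal maps; combined with \cref{rmk:embed.CWfin.EucCW} this shows $\rIP^d_{\loc}$ descends to the ordinary homotopy category on finite CW-pairs. Second, the \emph{long exact sequence of a pair}: this is precisely \cref{prp:IP.homology.longexact}, where I would only need to note that the connecting map $\delta$ is natural, which follows from the naturality of the fibration sequence $\ev_1^{-1}(\sfh) \to \cS\bsIP_{\loc}(\mathbf{D}(X,A)) \to \bsIP_{\loc}(\mathbf{C}_R(X,A))$ in $(X,A)$ and of the weak equivalences in \cref{lem:IP.homology.longexact}. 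Third, \emph{excision}: this is \cref{prp:excision}, which states exactly that $\iota_* \colon \sIP_{\loc}(B, A\cap B) \to \sIP_{\loc}(X,A)$ is a weak equivalence for a CW-triad $X = A \cup B$; passing to spectra and smashing with $\Sigma^n\sY$ preserves weak equivalences, so $\iota_*$ induces isomorphisms $\rIP_{\loc}^d(B, A\cap B \midbar \sY) \cong \rIP_{\loc}^d(X, A \midbar \sY)$. Fourth, \emph{finite additivity in the first variable}: for a finite disjoint union $X = \bigsqcup_{i=1}^k X_i$ in $\ECW$ (where the components are placed infinitely far apart), the lattice set $\fL^{\loc}_X$ decomposes compatibly as $\prod_i \fL^{\loc}_{X_i}$ and any IG localization flow on $X$ is a $\boxtimes$-product of its restrictions to the $X_i$ (the interaction terms connecting distinct components vanish by the $F$-function decay), giving a weak equivalence $\sIP_{\loc}(X) \simeq \prod_i \sIP_{\loc}(X_i)$ at the level of H-monoid sheaves, hence $\rIP_{\loc}^d(X \midbar \sY) \cong \bigoplus_i \rIP_{\loc}^d(X_i \midbar \sY)$ after passing to the stable range.

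The main obstacle I expect is \emph{naturality of all the comparison maps}, rather than the existence of the individual isomorphisms, since almost every structural ingredient (the long exact sequence, excision, the $\Omega$-spectrum structure) has already been supplied in the preceding subsections. Concretely, the delicate point is checking that the diagram relating the Kitaev-pump definition of the connecting map $\delta$ to the geometrically defined boundary-truncation maps commutes up to coherent homotopy, and that these homotopies can be chosen naturally in $(X,A)$ --- this requires tracking through the chain of weak equivalences $\bsIP_{\loc}(A) \to \ev_1^{-1}(\sfh)$, $\iota \colon \bsIP_{\loc}(X)_{L,R\star} \to \cS\bsIP_{\loc}(\mathbf{D}(X,A))$, $j_* \colon \bsIP_{\loc}(X) \to \bsIP_{\loc}(\mathbf{D}(X,A))_{L,R\star}$ from \cref{lem:IP.homology.longexact} and verifying each is a natural transformation of functors $\ECW^2 \to \Sh(\Man)$. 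I would handle this by observing that each of these morphisms is built from the inclusions of subspaces of $\boldsymbol{\Sigma}X$ and the adiabatic interpolation morphism $\vartheta_d$, both of which are manifestly functorial in $(X,A)$ with respect to Lipschitz continuous maps of pairs, so naturality is a bookkeeping exercise once the constructions are pinned down. The verification of the Eilenberg--Steenrod axioms themselves is then essentially a formal consequence.
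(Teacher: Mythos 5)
The overall route you take is the same as the paper's: the cohomological variable is handled by \cref{cor:spectrum.localizing}, exactness is \cref{prp:IP.homology.longexact}, and excision is \cref{prp:excision}; your extended discussion of naturality, while not written out in the paper, is the kind of routine verification the paper treats as implicit, and your observations about homotopy invariance and independence of $n$ are correct. The one place where you genuinely diverge is the finite additivity axiom, and there your argument has a gap. You propose to establish a direct sheaf-level weak equivalence $\sIP_{\loc}(X) \simeq \prod_i \sIP_{\loc}(X_i)$ by asserting that ``the interaction terms connecting distinct components vanish by the $F$-function decay.'' This is not correct: at a fixed localization scale $s$ the cross-component interactions are only \emph{decaying}, not zero, and in the category $\ECW$ the components of a finite disjoint union sit inside a fixed Euclidean space, so they cannot be ``placed infinitely far apart'' --- one only knows $\inf_i \mathrm{dist}(X_i, X_i^c) > 0$. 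Turning this decay into an actual weak equivalence requires the nontrivial $s \to \infty$ mechanism together with the gap-stability estimate of \cref{thm:gap.stability}; this is exactly the content of \cref{prp:additivity}, which the paper proves \emph{after} and \emph{separately from} \cref{thm:IP.bivariant}, and which is stated as a stronger result (countable additivity under a metric separation hypothesis). In the proof of \cref{thm:IP.bivariant} itself the paper instead derives finite additivity purely formally from excision and the long exact sequence: for $X = X_1 \sqcup X_2$, the excision isomorphism $\rIP_{\loc}(X_2) = \rIP_{\loc}(X_2, \emptyset) \cong \rIP_{\loc}(X, X_1)$ is induced by the inclusion of pairs, so the quotient $\rIP_{\loc}(X) \to \rIP_{\loc}(X, X_1) \cong \rIP_{\loc}(X_2)$ is split by $\iota_{2,*}$, the long exact sequence of $(X, X_1)$ splits, and $\rIP_{\loc}(X) \cong \rIP_{\loc}(X_1) \oplus \rIP_{\loc}(X_2)$. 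You should replace your direct argument by this formal one, or else carry out the analytic argument of \cref{prp:additivity} in full.
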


For $(X,A) \in \ECW^2$, set $\mathbf{D}(X,A) \coloneqq \boldsymbol{\Sigma}A \cup X \subset \boldsymbol{\Sigma}X$, $\mathbf{C}_L (X,A) = \{ (\bm{x},v) \in \mathbf{D}(X,A) \mid t \leq 0 \}$ and $\mathbf{C}_R (X,A) = \{ (\bm{x},v) \in \mathbf{D}(X,A) \mid t \geq 0 \}$. The latter space $\mathbf{C}_L(X,A)$ is identified with the cone $\mathbf{C}(X,A)$. Following \cref{subsection:bulk.boundary}, we define the following sheaves on $\Man$. 
\begin{itemize}
    \item Let $\bsIP_{\loc}(\mathbf{D}(X,A) \midbar \blank )_{R\star}$ denote the subsheaf of $\bsIP_{d+1}$ consisting of smooth sections $\bsfH$ such that the constants $K_{\loc,\mathbf{C}_R(X,A),\bullet,f }^{(k)}$ in \eqref{eqn:constant.K.local} are finite for $\bullet = \sfH-\sfh, \check{\sfH}-\sfh, \overline{\sfH} - \sfh$, and for any $f \in \cF$, $k \in \bN$, and a relatively compact open chart $\sU$. 
    \item Let $\bsIP_{\loc}(\mathbf{D}(X,A) \midbar \blank )_{L}$ denote the subsheaf of $\bsIP_{d+1}$ consisting of smooth sections supported on $\mathbf{C}_L(X,A)$. 
    \item Let $\cS \bsIP_{\loc}(X,A)$ denote the subsheaf of $\cP \bsIP_{\loc}(\mathbf{D}(X,A))$ whose section on $\sM \in \Man$ consists of smooth families of paths $\bsfH \in \cP\bsIP_{\loc}(X,A\midbar \sM)$ such that $\ev_0 \bsfH \in \bsIP_{\loc}(\mathbf{D}(X,A)\midbar \sM)_{R\star} $ and $\ev_1\bsfH \in  \bsIP_{\loc}(\mathbf{D}(X,A)\midbar \sM)_L$. 
\end{itemize}

\begin{lem}\label{lem:IP.homology.longexact}
Let $(X,A) \in \ECW^2$. 
\begin{enumerate}
    \item The morphism $\ev_1\colon \cS \bsIP_{\loc}(X,A) \to \bsIP_{\loc}(X,A)$ is a fibration. 
    \item The inclusion $j_* \colon \bsIP_{\loc}(X) \to  \bsIP_{\loc}(\mathbf{D}(X,A))_{L,R\star}$ is a weak equivalence. 
    \item The inclusion $\iota \colon \bsIP_{\loc}(X)_{L,R\star} \to \cS \bsIP_{\loc}(\mathbf{D}(X,A))$ is a weak equivalence. 
    \item The morphism $i \circ \kappa_A^{\loc} \colon \bsIP_{\loc}(A) \to \Omega \bsIP_{\loc}(\boldsymbol{\Sigma} A) \subset \ev_1^{-1}(\sfh) \subset \cS \bsIP_{\loc}(\mathbf{D}(X,A))$ is a weak equivalence. Moreover, as morphisms to $\cS \bsIP_{\loc}(\mathbf{D}(X,A))$, there is a homotopy $i \circ \kappa_A^{\loc} \sim \iota \circ j_* \circ h_*$, where $h \colon A \to X$ denotes the inclusion. 
\end{enumerate}
\end{lem}
\begin{proof}
    The claims (1), (2), (3) are proved in the same way as \cref{lem:weak.equivalence.stick,lem:switch.constant.equivalence,lem:switch.loop.equivalence,lem:switch.diagram.commute} as well as \cref{lem:switch.localization.flow}. 
    For (3), notice that both $\mathbf{C}_L(X,A)$ and $\mathbf{C}_R(X,A)$ contains the common subspace $\boldsymbol{\Sigma}A$, and hence $\vartheta_d(\bsfH)$ is supported near $\boldsymbol{\Sigma}A$. 
    The weak equivalence in claim (4) follows from \cref{prp:flasque} and \cref{cor:spectrum.localizing}. The homotopy is proved in the same way as \cref{lem:switch.diagram.commute}.
\end{proof}

\begin{prp}\label{prp:IP.homology.longexact}
    There is a long exact sequence 
    \begin{align*}
        \cdots \to \rIP_{\loc,d+1}(X,A \midbar \sY) \xrightarrow{\delta} \rIP_{\loc,d}(A \midbar \sY) \to \rIP_{\loc,d}(X \midbar \sY) \to \rIP_{\loc,d}(X,A \midbar \sY) \xrightarrow{\delta}  \rIP_{\loc,d-1}(A \midbar \sY) \to  \cdots   
    \end{align*}
    for any CW-complex $\sY$.
\end{prp}
\begin{proof}
    By \cref{lem:IP.homology.longexact}, the homotopy long exact sequence associated with the fibration 
    \[ 
        \ev_1^{-1}(\sfh) \to \cS \bsIP_{\loc}(\mathbf{D}(X,A)) \xrightarrow{\ev_1} \bsIP_{\loc}(\mathbf{C}_R(X,A))
    \]
    gives the desired long exact sequence. 
\end{proof}

\begin{lem}\label{lem:cut.project}
    Let $P \colon X \to A$ be a retract, which is not necessarily continuous but satisfies that $\rmd (\bm{x},P(\bm{x}) ) = \mathrm{dist}(\bm{x},A)$ for any $\bm{x} \in X$. Let $\sfH \in \mathscr{GP}(X \midbar \sM)$ such that $K_{\loc, A,\sfH-\sfh,f}^{(k)}$ for any $f \in \cF$ and $k \in \bN$. 
    Then, $P_*\sfH$ is a smooth family of IG localization flows on $A$.
\end{lem}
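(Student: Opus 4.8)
The push-forward $P_*$ is built, exactly as in \cref{lem:MCW.induced.hom} and \cref{prp:localizing.induced.hom}, from the injective map $\widetilde{P} \coloneqq P \times \iota_X \colon X \to A \times \bR^{l_X}$, where $\iota_X \colon X \hookrightarrow \bR^{l_X}$ is the fixed bi-Lipschitz embedding. The point that must be dealt with is that $P$ itself is \emph{not} large-scale Lipschitz on all of $X$: from $\rmd(P(\bm{x}),P(\bm{y})) \le \mathrm{dist}(\bm{x},A)+\rmd(\bm{x},\bm{y})+\mathrm{dist}(\bm{y},A)$ one only controls $P$ on sets at bounded distance from $A$, so a priori $\widetilde{P}$ does not carry $\mathbbl{\Lambda} \in \fL_X^{\loc}$ into $\fL_A^{\loc}$ (equi-linear properness over $A$ fails). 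The decay hypothesis $K_{\loc,A,\sfH-\sfh,f}^{(k)}<\infty$ for all $f\in\cF$ and $k$ is precisely the input that repairs this: it says that the local generators of $\sfH$ are trivial up to an error decaying like $f(\mathrm{dist}_s(\bm{x},A))$, uniformly in the localization parameter $s$.

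The plan is the following. First I would record the coarse-geometric fact that, on any region at bounded $\rmd_s$-distance from $A$, the map $P$ is an $s$-uniform quasi-isometry onto $A$: for $\bm{x},\bm{y}$ with $\mathrm{dist}_s(\bm{x},A),\mathrm{dist}_s(\bm{y},A)\le\rho$ one has $\rmd_s(P(\bm{x}),P(\bm{y}))\le\rmd_s(\bm{x},\bm{y})+2\rho$ and, using $\rmd_s(\bm{x},P(\bm{x}))=\mathrm{dist}_s(\bm{x},A)\le\rho$, also $\rmd_s(\bm{x},\bm{y})\le\rmd_s(P(\bm{x}),P(\bm{y}))+2\rho$; consequently $\widetilde{P}$ restricted to the family $\{\bm{x}\in\Lambda_s:\mathrm{dist}_s(\bm{x},A)\le\rho\}_{s}$ is an injective large-scale bi-Lipschitz map with constants independent of $s$, and its image lies in $\fL_A^{\loc}$ (weak equi-uniform discreteness comes from the $\iota_X$-coordinate, equi-linear properness from the estimate just displayed, the relevant growth bounds being \cref{lem:equi.polynomial.growth.inherit.2} and \cref{exmp:lattice.Rl}). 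Next, using the decay together with the localization-flow version of the truncation--diffusion construction of \cref{prp:cut.trivial.Hamiltonian}, \cref{rmk:cutoff.approximate} and \cref{lem:cut.diffused} (whose hypotheses survive the passage to localization flows by \cref{lem:auto.localizing.path}), I would reorganize $\widetilde{P}_*\sfH$, relative to its exponentially small tail away from the copy $A\times\{0\}^{l_X}$ of $A$, into a localization flow genuinely supported on a fixed $\rmd_s$-neighborhood of $A$. Finally, the three conditions defining a member of $\sIP_{\loc}(A\midbar\sM)$ follow as in \cref{lem:induced.map.GP} and \cref{prp:localizing.induced.hom}: the localization $C^k$-norms are finite because the transfer estimate \eqref{eqn:f-norm.push} for a large-scale bi-Lipschitz map, applied with the $s$-independent constants above, gives $\vvert P_*\sfH\vvert_{(s),\sU,C^k,\mu}\le C(\rho,\mu)\cdot\vvert\sfH\vvert_{(s),\sU,C^k,\mu_1}$ uniformly in $s$ (the discarded tail contributing a finite correction via \cref{lem:sum.exponential}); the spectral gap is preserved because the GNS triple of $(P_*\sfH)(p\,;s)$ decomposes as the tensor product of that of the near-$A$ part with a trivial one; and invertibility is inherited since $P_*$ is a morphism of local commutative H-monoid sheaves, so a homotopy inverse and null-homotopy of $\sfH$ (which by hypothesis satisfy the same decay) push forward to a homotopy inverse and null-homotopy of $P_*\sfH$.

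The main obstacle is the reorganization step and, more generally, making every coarse-geometric constant attached to $\widetilde{P}$ genuinely independent of $s$ before passing to the colimit over $s\in[1,\infty)$. In particular the truncation lemma must be invoked with a per-site, quasi-local perturbation estimate rather than an $\ell^1$ one, since the neighborhood $\{\bm{x}:\mathrm{dist}_s(\bm{x},A)\le\rho\}$ of $A$ is in general unbounded; the uniformity of $K_{\loc,A,\sfH-\sfh,f}^{(k)}$ in $s$ is exactly what makes this per-site perturbation uniformly small, but checking that gappedness and the $\fL_A^{\loc}$-conditions both survive this uniformly in $s$ is where the work lies.
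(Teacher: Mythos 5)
Your proposal takes a genuinely different route from the paper's and is considerably more elaborate than what is actually needed. The paper does not truncate, reorganize, or restrict the lattice at all: it simply takes $\widetilde{P}_*\sfH$ as given and verifies the localization norm estimate \eqref{eqn:localizing.norm} directly. The whole argument reduces to the elementary observation that $\rmd(P(\bm{x}),P(\bm{y})) \leq 2\rmd(\bm{x},\bm{y}) + 2\mathrm{dist}(\bm{x},A)$, hence $P^{-1}(B_r(P(\bm{x}))) \supset B_{r/2 - \mathrm{dist}(\bm{x},A)}(\bm{x})$, followed by a two-case split in $r$: when $r \leq 4\,\mathrm{dist}(\bm{x},A)$ the hypothesis $K^{(k)}_{\loc,A,\sfH-\sfh,f}<\infty$ beats the factor $f_\mu(sr)^{-1}$ because $\sfH_{\bm{x}}-\sfh_{\bm{x}}$ is $f_{\mu_1}(s\,\mathrm{dist}(\bm{x},A))$-small and $\mathrm{dist}(\bm{x},A)\gtrsim r$; when $r \geq 4\,\mathrm{dist}(\bm{x},A)$ one has $r/2 - \mathrm{dist}(\bm{x},A) \geq r/4$ and the ordinary almost local norm $\vvert\sfH\vvert_{\sU,C^k,\mu_1}$ suffices. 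That's the entire proof. Your plan replaces this one computation with a truncation--diffusion pipeline (\cref{prp:cut.trivial.Hamiltonian}, \cref{rmk:cutoff.approximate}, \cref{lem:cut.diffused}) to first put $\widetilde{P}_*\sfH$ literally on a proper sublattice, and then pushes forward along the bi-Lipschitz restriction of $P$. You correctly identify the geometric issue that motivates this — $P$ is not linearly proper, so $\widetilde{P}(\mathbbl{\Lambda})$ does not obviously satisfy condition (iii) of $\fL^{\loc}_A$ — and you also correctly flag the obstacle in your own repair: \cref{lem:cut.diffused} carries an $\ell^1$-type smallness requirement $\sum_{\bm x}\|\sfH_{\bm x}-\sfH'_{\bm x}\|<1/4$, which would need to be replaced by a per-site quasi-local estimate uniformly in $s$, and you do not carry this out. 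The paper simply never enters this territory. Your remarks about gappedness and invertibility are correct but superfluous in the paper's framework, since $\widetilde{P}_*$ is a site relabeling that identifies $\cA_{\Lambda_s}$ with $\cA_{\widetilde{P}(\Lambda_s)}$ and preserves the GNS data, so those properties transport tautologically and the paper does not restate them.
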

\begin{proof}
    It suffices to show that 
    \begin{align*}
    \sup_{\bm{x} \in \Lambda_X}\sup_{r>0} f(sr)^{-1} \cdot \| \mathtt{ev}_s(\sfH_{\bm{x}} - \Pi_{P^{-1}(B_{r}(P(\bm{x})))}(\sfH_{\bm{x}}) )\|_{\sU,C^k} <\infty.
    \end{align*}
    For $\bm{x},\bm{y} \in \Lambda_X$, we have
    \begin{align*}
        \rmd(P(\bm{x}),P(\bm{y})) \leq {}&{} \rmd(\bm{x},\bm{y}) + \rmd(\bm{x},P(\bm{x})) + \rmd(\bm{y},P(\bm{y})) \\
        \leq{}&{} \rmd(\bm{x},\bm{y}) + \mathrm{dist}(\bm{x},A) + \mathrm{dist}(\bm{y},A)\\
        \leq{}&{} 2\rmd(\bm{x},\bm{y}) + 2\mathrm{dist}(\bm{x},A) ,
    \end{align*}
    that is, $P^{-1}(B_r(P(\bm{x}))) \supset B_{r/2 - \mathrm{dist}(\bm{x},A) }$. Therefore, 
\begin{align*}
    {}&{} \sup_{s \geq 1}\sup_{\bm{x} \in \Lambda}\sup_{r>0} f_{\mu}(sr)^{-1} \cdot \| \mathtt{ev}_s(\sfH_{\bm{x}} - \Pi_{P^{-1}(B_{sr}(P(\bm{x})))}(\sfH_{\bm{x}})) \|_{\sU,C^k} \\
    \leq {}&{}
    \sup_{s \geq 1}\sup_{\bm{x} \in \Lambda}\sup_{r>0} f_{\mu}(sr)^{-1} \cdot  \|  \mathtt{ev}_s(\sfH_{\bm{x}} - \Pi_{sr/2-\mathrm{dist}_s(\bm{x},A)}(\sfH_{\bm{x}}))\|_{\sU,C^k}\\
    \leq {}&{}
    \begin{cases}
        \sup_{s \geq 1}\sup_{r>0} f_{\mu}(sr)^{-1} \cdot 2 f_{\mu_1}(s\mathrm{dist}(\bm{x},A)) \cdot K_{\loc,A,\sfH-\sfh,\mu_1}^{(k)} & \text{ if $r \leq 4 \mathrm{dist}(\bm{x},A)$}, \\
        \sup_{s \geq 1}\sup_{r>0} f_{\mu}(sr)^{-1} \cdot  f_{\mu_1}(sr/2 - s\mathrm{dist}(\bm{x},A))^{-1} \cdot \vvert \sfH \vvert_{\sU,C^k,\mu_1}& \text{ if $r \geq 4 \mathrm{dist}(\bm{x},A)$}.
    \end{cases}
\end{align*}
    The first case is bounded by $c_{0,\mu,4} \cdot K_{\loc,A,\sfH-\sfh,\mu_1}^{(k)}$. For the second case, in which $r/2 - \mathrm{dist}(\bm{x},A) \geq r/4$, it is bounded by $c_{0,\mu,4} \cdot \vvert \sfH \vvert_{\sU,C^k,\mu_1}$. 
    This proves the lemma.
\end{proof}

\begin{prp}\label{prp:excision}
    Let $X \in \ECW$ and let $A, B \leq X$ be subcomplexes such that $X=A \cup B$. 
    Then the morphism $\iota_* \colon \sIP_{\loc}(B,A \cap B) \to \sIP_{\loc}(X,A)$ induced from the inclusion $\iota \colon (B,A \cap B) \to (X,A)$ is a weak equivalence. 
\end{prp}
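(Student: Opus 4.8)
The statement is an excision property for the $\Omega$-spectrum sheaf $\sIP_{\loc}$, and the natural approach is to reduce it to the long exact sequence of \cref{prp:IP.homology.longexact} together with the five lemma applied at the level of homotopy groups, exactly parallel to the way excision is proved in classical generalized homology. The plan is to first set up the two long exact sequences: the one for the pair $(X,A)$ and the one for the pair $(B, A\cap B)$, and to construct a morphism of long exact sequences between them induced by $\iota_*$ and by the identity on the respective ``$A$-parts''. The key point is that the inclusion $A\cap B\hookrightarrow A$ and the inclusion $B\hookrightarrow X$ are Lipschitz continuous maps of pairs, so they induce morphisms of the corresponding $\Omega$-spectrum sheaves, and one checks naturality of the connecting maps $\delta$ directly from the construction in \cref{lem:IP.homology.longexact}. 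Then it suffices to show that the morphism $\sIP_{\loc}(A\cap B)\to \sIP_{\loc}(A)$ induced by inclusion, after a suitable identification, is \emph{not} the thing we need — rather, what we need is that the relative spectra $\bsIP_{\loc}(B, A\cap B)$ and $\bsIP_{\loc}(X,A)$ are weakly equivalent, which by the five lemma follows once we exhibit \emph{one} of the three comparisons as a weak equivalence and argue the other two are controlled.

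The cleaner route, which I would actually carry out, avoids the five lemma and instead works directly with the cone construction. Recall $\mathbf{C}(X,A) = X\cup \mathbf{C}A$ and $\mathbf{C}(B,A\cap B) = B\cup \mathbf{C}(A\cap B)$, both viewed as objects of $\ECW$ via embeddings into Euclidean spaces. The inclusion $\iota$ realizes $\mathbf{C}(B,A\cap B)$ as a subcomplex of $\mathbf{C}(X,A)$, and since $X = A\cup B$ we have a decomposition $\mathbf{C}(X,A) = \mathbf{C}A \cup \mathbf{C}(B, A\cap B)$ where the two pieces overlap in $\mathbf{C}(A\cap B)$. The crucial observation is that $\mathbf{C}A$ is \emph{flasque} in the sense of \cref{defn:coarse.flasque.scaleable} — it is a cone, with $T_{X,t}(\bm{x},v) = (\bm{x}, v-t)$ pushing everything toward $-\infty$ — and more relevantly, there is a retraction $P\colon \mathbf{C}(X,A)\to \mathbf{C}(B, A\cap B)$ (not continuous, but nearest-point in the appropriate sense) that collapses $\mathbf{C}A$ onto $\mathbf{C}(A\cap B)$. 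I would then invoke \cref{lem:cut.project}: given a section $\sfH\in\sIP_{\loc}(\mathbf{C}(X,A)\midbar \sM)$, after a Lipschitz homotopy (using \cref{lem:cut.diffused} and the localized version of the interpolation theorem, as in the proof of \cref{lem:switch.localization.flow}(2)) it becomes supported on a neighborhood of $\mathbf{C}(B,A\cap B)$, with the constant $K_{\loc,\mathbf{C}(B,A\cap B),\sfH-\sfh,f}^{(k)}$ finite; then $P_*\sfH$ lies in $\sIP_{\loc}(\mathbf{C}(B,A\cap B)\midbar \sM)$ and is smoothly homotopic to $\sfH$ via the convex-combination argument. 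This gives surjectivity of $\iota_*$ on smooth homotopy sets; injectivity follows by applying the same construction to a null-homotopy.

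Concretely, the steps in order: (1) record that $\iota$ is a Lipschitz continuous map of CW-pairs, so $\iota_*\colon \bsIP_{\loc}(B,A\cap B)\to \bsIP_{\loc}(X,A)$ is a well-defined morphism of $\Omega$-spectrum objects in $\Sh(\Man)$; (2) construct the nearest-point retraction $P\colon \mathbf{C}(X,A)\to\mathbf{C}(B,A\cap B)$ collapsing the $\mathbf{C}A$-direction, verify it is linearly proper and satisfies $\rmd(\bm{x},P(\bm{x})) = \mathrm{dist}(\bm{x},\mathbf{C}(B,A\cap B))$; (3) given a compact $\sM$ and $\bsfH\in\bsIP_{\loc}(X,A\midbar\sM)$ supported on $\mathbbl{\Lambda}$, use the truncated localized Kitaev pump and \cref{thm:interpolation.loop}, \cref{lem:cut.diffused} (as in \cref{prp:flasque} and \cref{lem:switch.localization.flow}) to find a smooth homotopy to $\bsfH'$ whose components have finite $K_{\loc,\mathbf{C}(B,A\cap B),\bullet,f}^{(k)}$; (4) apply \cref{lem:cut.project} to get $P_*\bsfH'\in\bsIP_{\loc}(B,A\cap B\midbar\sM)$, and use the convex combination $t P_*\bsfH' + (1-t)\bsfH'$ (legitimate since these differ by an inner automorphism of controlled norm, via \cref{lem:asymptotically.equal.derivation}-type reasoning) together with \cref{lem:cut.diffused} to show $\iota_* P_* \bsfH'\simeq \bsfH'\simeq\bsfH$; (5) run the same argument relative to $\partial \bD^n$ to get injectivity, hence $\iota_*$ is an isomorphism on all $\sIP_{\loc}(\,\cdot\,)[\sM]$, i.e.\ a weak equivalence. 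The main obstacle I anticipate is step (3)–(4): making precise that after the interpolation one genuinely lands in the subsheaf where $K_{\loc,\mathbf{C}(B,A\cap B),\bullet,f}^{(k)}<\infty$, and that $P_*\bsfH'$ and $\bsfH'$ are connected through \emph{localization flows} (not merely UAL Hamiltonians) — this requires the equi-polynomial-growth bookkeeping of \cref{lem:equi.polynomial.growth.inherit.2} and \cref{exmp:lattice.Rl} to be carried through uniformly in the scaling parameter $s$, exactly the kind of estimate that made \cref{thm:scaleable} delicate.
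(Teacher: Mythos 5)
Your second ("cleaner") route is essentially the paper's proof: decompose $\mathbf{C}(X,A) = \mathbf{C}A \cup \mathbf{C}(B, A\cap B)$ with $\mathbf{C}A$ flasque, use the truncated localized Kitaev pump together with the truncated adiabatic connection to deform a given section so its local terms decay away from $\mathbf{C}(B,A\cap B)$, retract via a nearest-point map and \cref{lem:cut.project}, and finish with a convex combination, with injectivity following because every step sends $\sfh$ to $\sfh$. The bookkeeping you flag as the anticipated obstacle is handled in the paper by working in the double cone $\mathbf{C}\mathbf{C}(X,A)$ (to house the extra coordinate introduced by the Kitaev pump), taking the retraction $P\colon \mathbf{C}\mathbf{C}(B,A\cap B)\to B\cup\mathbf{C}^2(A\cap B)$ followed by $\widetilde{T}_{X,*}$ as in \cref{prp:flasque} (rather than your $P\colon\mathbf{C}(X,A)\to\mathbf{C}(B,A\cap B)$), using the decay estimate \cref{cor:Lieb.Robinson.approx.Ck} for the required finiteness of $K_{\loc,\cdot,\cdot,f}^{(k)}$, and justifying the final convex combination through \cref{prp:cut.trivial.Hamiltonian} (the endpoints share the same ground state) rather than an inner-automorphism argument.
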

\begin{proof}
    Let $\bsfH \in \bsIP_{\loc}(X,A \midbar \sM)$. 
    As is considered in \cref{prp:flasque}, the truncated Kitaev pump $\kappa_X^{\loc,R} (\bsfH)$ gives a smooth null-homotopy of $\sfH$ in $\sIP_{\loc}(\mathbf{C} \mathbf{C}(X,A) ,\sM)$. 
    Let 
\[
    P \colon \mathbf{C}\mathbf{C} (B,A \cap B) \to B \cup \mathbf{C}^2(A \cap B)
\]
    be a (not necessarily continuous) map such that $\rmd( \bm{x}, P(\bm{x})) = \mathrm{dist}(\bm{x}, B \cup \mathbf{C}^2(A \cap B))$.
    By \cref{cor:Lieb.Robinson.approx.Ck,lem:cut.project}, the adiabatic connection along this homotopy gives new localization flows
    \begin{align*}
    \fT_{X,A}^t(\bsfH) \coloneqq \widetilde{T}_{X,*} \circ P_*  \circ \alpha \Big( \Pi_{\mathbf{C}^2 B^c}^{\star} \Big( \sfG_{\kappa_X^{\loc,R} (\bsfH)} \Big) \,; t \Big)^{-1} (\sfH),\\
    \check{\fT}_{X,A}^t(\bsfH) \coloneqq \widetilde{T}_{X,*} \circ P_*   \circ \alpha \Big( \Pi_{\mathbf{C}^2 B^c}^{\star} \Big( \sfG_{\check{\kappa}_X^{\loc,R} (\bsfH)} \Big) \,; t \Big)^{-1} (\check{\sfH}),\\
    \overline{\fT}_{X,A}^t(\bsfH) \coloneqq \widetilde{T}_{X,*} \circ P_*  \circ \alpha \Big( \Pi_{\mathbf{C}^2 B^c}^{\star} \Big( \sfG_{\bar{\kappa}_X^{\loc,R} (\bsfH)} \Big) \,; t \Big)^{-1} (\overline{\sfH}),
    \end{align*}
    which are all supported on $\mathbf{C}(X,A)$. 
    The distinguished ground state of $\fT_{X,A}^1(\bsfH) $, $\check{\fT}_{X,A}^1(\bsfH) $, and $\overline{\fT}_{X,A}^1(\bsfH) $ restricts to the trivial ground state $\omega_0$ on $\cA_{\mathbf{C} B^c}$. Therefore, the convex combination gives a smooth homotopy of triples
    \begin{align*}
        \bsfH= {}&{}(\fT_{X,A}^0(\bsfH), \check{\fT}_{X,A}^0(\bsfH) , \overline{\fT}_{X,A}^0(\bsfH)) \\ 
        \simeq {}&{} (\fT_{X,A}^1(\bsfH), \check{\fT}{}_{X,A}^1(\bsfH) , \overline{\fT}{}_{X,A}^1(\bsfH))\\
        \simeq {}&{}
        (\Theta_{\mathbf{C}B^c , \omega_0}\fT_{X,A}^1(\bsfH), \Theta_{\mathbf{C}B^c , \omega_0}\check{\fT}{}_{X,A}^1(\bsfH) , \Theta_{\mathbf{C}B^c , \omega_0}\overline{\fT}{}_{X,A}^1(\bsfH)) \in \bsIP_{\loc}(\mathbf{C}(B,A \cap B) \midbar \sM).
    \end{align*}
    This shows the surjectivity of $\iota_*$. The injectivity is proved similarly. Indeed, if $\bsfH(p)=\sfh$ for $p \in \sA \subset \sM$, then $\fT_{X,A}^t(\bsfH)(p) = \sfh$, $\check{\fT}_{X,A}^t(\bsfH)(p)=\sfh$, and $\overline{\fT}_{X,A}^t(\bsfH) (p) = \sfh$ holds.   
\end{proof}

\begin{proof}[Proof of \cref{thm:IP.bivariant}]
    By \cref{cor:spectrum.localizing}, this functor is cohomological with respect to the variable $\sY$.
    We show that, once we fix $\sY \in \CW_*$, the assignment $(X,A) \mapsto \rIP_{\loc}^d(X,A \midbar \sY)$ satisfies the Eilenberg-Steenrod axiom of homology theory. First, the exactness axiom is proved in  \cref{prp:IP.homology.longexact}. Next, \cref{prp:excision} shows that, for any $X \in \ECW$ and $A,B \leq X$ such that $A \cup B =X$, there is an isomorphism
    \begin{align*}
    \rIP_{\loc,d}(X,A \midbar \sM) \cong \rIP_{\loc,d}(B, A \cap B \midbar \sM),
    \end{align*}
    which shows the excision axiom. Finally, the finite additivity axiom follows from the excision axiom.
\end{proof}

\begin{cor}\label{cor:mayer-vietoris}
    The bivariant homology functor $\rIP_{\loc,d}(X \midbar \sY)$ has the following Mayer--Vietoris exact sequence for subcomplexes $A,B \leq X$ such that $X=A \cup B$;
    \begin{align*}
        \cdots \xrightarrow{\delta_{\mathrm{MV}}}{}&{} \rIP_{\loc,d+1}(A \cap B \midbar \sY) \to \rIP_{\loc,d+1}(A \midbar \sY) \oplus \rIP_{\loc,d+1}(B  \midbar \sY)  \to \rIP_{\loc,d+1}(X \midbar \sY) \\
         {}&{}\xrightarrow{\delta_{\mathrm{MV}}}  \rIP_{\loc,d}(A\cap B \midbar \sY) \to \rIP_{\loc,d}(A \midbar \sY) \oplus \rIP_{\loc,d}(B \midbar \sY) \to \rIP_{\loc,d}(X \midbar \sY)\xrightarrow{\delta_{\mathrm{MV}}} \cdots   .
    \end{align*}    
\end{cor}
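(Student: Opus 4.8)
The statement to prove is Corollary~\ref{cor:mayer-vietoris}, which asks for a Mayer--Vietoris exact sequence for the bivariant homology functor $\rIP_{\loc}^d(X \midbar \sY)$ associated with a decomposition $X = A \cup B$ into subcomplexes.

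\medskip

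\textbf{Overall approach.} This is the standard deduction of a Mayer--Vietoris sequence from the Eilenberg--Steenrod axioms, which have already been verified in \cref{thm:IP.bivariant}: for fixed $\sY \in \CW_*$, the assignment $(X,A) \mapsto \rIP_{\loc}^d(X,A \midbar \sY)$ is a generalized homology theory with long exact sequences of pairs (\cref{prp:IP.homology.longexact}) and excision (\cref{prp:excision}). The plan is to run the usual diagram-chase argument that produces the Mayer--Vietoris sequence of a homology theory from the excision isomorphism and the long exact sequences of the pairs $(X,A)$, $(X,B)$, and $(B, A\cap B)$.

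\medskip

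\textbf{Key steps in order.} First I would record the excision isomorphism supplied by \cref{prp:excision}: the inclusion $\iota\colon (B, A\cap B) \to (X,A)$ induces
\[
    \iota_* \colon \rIP_{\loc}^{d}(B, A\cap B \midbar \sY) \xrightarrow{\ \cong\ } \rIP_{\loc}^{d}(X, A \midbar \sY)
\]
for all $d$, and similarly with the roles of $A$ and $B$ swapped. Second, I would write down the long exact sequence of the pair $(X,A)$ from \cref{prp:IP.homology.longexact} and of the pair $(B, A\cap B)$, and use the excision isomorphism to identify the relative groups $\rIP_{\loc}^{d}(X,A \midbar \sY) \cong \rIP_{\loc}^{d}(B, A\cap B \midbar \sY)$ compatibly with the maps in the two long exact sequences (this compatibility is a naturality statement for the long exact sequence of \cref{prp:IP.homology.longexact} with respect to maps of pairs, which follows from the functoriality of the construction in \cref{lem:IP.homology.longexact} since all the constructions $\mathbf{C}(X,A)$, $\mathbf{D}(X,A)$, $\cS\bsIP_{\loc}$ are functorial in the pair). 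Third, I would splice the two long exact sequences together along the common relative group: define the Mayer--Vietoris connecting map $\delta_{\mathrm{MV}}$ as the composite
\[
    \rIP_{\loc}^{d+1}(X \midbar \sY) \to \rIP_{\loc}^{d+1}(X,A \midbar \sY) \xrightarrow{\iota_*^{-1}} \rIP_{\loc}^{d+1}(B, A\cap B \midbar \sY) \xrightarrow{\ \partial\ } \rIP_{\loc}^{d}(A\cap B \midbar \sY),
\]
and the two unnamed maps as $(\text{incl}_A)_* \oplus (-\text{incl}_B)_*$ and $(\text{incl}_X)_* - (\text{incl}_X)_*$ respectively, with the usual signs. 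Fourth, I would verify exactness at each of the three spots by the standard three-way diagram chase (the ``Barratt--Whitehead'' lemma), using exactness of the long exact sequences of $(X,A)$ and $(B, A\cap B)$ together with the excision identification.

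\medskip

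\textbf{Expected main obstacle.} The only non-formal point is the naturality of the long exact sequence of \cref{prp:IP.homology.longexact} with respect to the map of pairs $\iota\colon (B, A\cap B)\to (X,A)$ --- that is, that the connecting homomorphisms $\delta$ commute with $\iota_*$. Since in the present setup the long exact sequence arises from the homotopy long exact sequence of the fibration $\ev_1^{-1}(\sfh) \to \cS\bsIP_{\loc}(\mathbf{D}(X,A)) \xrightarrow{\ev_1} \bsIP_{\loc}(\mathbf{C}_R(X,A))$ (\cref{lem:IP.homology.longexact}), and the whole tower of sheaves $\mathbf{D}(X,A)$, $\mathbf{C}_R(X,A)$, $\cS\bsIP_{\loc}$ is manifestly functorial under Lipschitz maps of CW-pairs in $\ECW^2$, this naturality is essentially automatic; I would just note that the inclusion of pairs induces a morphism of the defining fibrations and hence a morphism of homotopy long exact sequences, and that the excision weak equivalence of \cref{prp:excision} is itself a morphism of sheaves (it is induced by $\iota_*$ together with the cutoff construction $\Theta$), so it respects all the structure maps. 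Everything else is a formal consequence of the Eilenberg--Steenrod axioms and requires no further analysis of quantum spin systems.
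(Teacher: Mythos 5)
Your proposal is correct and matches the paper's (implicit) argument: the paper states this as a corollary of \cref{thm:IP.bivariant} without a separate proof precisely because it is the standard derivation of the Mayer--Vietoris sequence from the Eilenberg--Steenrod axioms via excision and the splicing/Barratt--Whitehead argument you describe. Your remark on naturality of the connecting map, which follows from the functoriality of the constructions $\mathbf{C}(X,A)$, $\mathbf{D}(X,A)$, and $\cS\bsIP_{\loc}$ under morphisms in $\ECW^2$, correctly identifies the one point that is not purely formal.
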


We further discuss the additivity axiom of $\sIP_{\loc}(X)$. 
Indeed, the countable additivity also holds in a weaker sense, under an assumption of metrical control. 

\begin{prp}\label{prp:additivity}
    Let $X = \bigsqcup_{i \in I} X_i \in \ECW$ such that $\mathbbl{\Delta} \coloneqq \inf_{i} \mathrm{dist}(X_i,X_i^c) >0$. Then, there is a weak equivalence of $\sIP_{\loc}(X)$ and the `uniform' direct product $\prod_{i \in I}^u \sIP(X_i)$ defined by
    \begin{align*}
     {\prod_{i \in I}}^u \  \sIP(X_i \mid \sM)\coloneqq \bigg\{ (\sfH^{(i)})_{i \in I} \in \prod_{i \in I} \sIP_{\loc}(X_i \midbar \blank) \mid \sup_i \vvert \sfH^{(i)} \vvert_{\sU,C^k,\loc,f} <\infty \text{ for any $\sU,k,f$}\bigg\}. 
    \end{align*}
\end{prp}
\begin{proof}
    We show that the canonical inclusion $c \colon \prod_i^u \sIP_{\loc}(X_i) \to \sIP_{\loc}(X)$ is a weak equivalence. To see this, we show that the homotopy fiber sheaf given by
    \[
    \mathop{\mathrm{hofib}} (c) (\sM)  \coloneqq  \{ \sfH \in \cP \sIP_{\loc}(X \midbar \sM) \mid \ev_0\sfH=\sfh, \  \ev_1 \sfH \in \Im c \}
    \]
    is contractible. This will show that $c$ induces isomorphisms of homotopy groups of degrees $\geq 1$. For the $\pi_0$-groups, compare the $\pi_1$-groups of $\sIP_{\loc}(X \times \bR)$ and $\prod_{i }^u \sIP_{\loc}(X_i \times \bR)$. 

    We take $\sfH \in \mathop{\mathrm{hofib}}(c)(\sM)$ and show that it is smoothly null-homotopic. For a UAL derivation $\sfG$, we write $\sfG^{\Pi} \coloneqq \prod_{i \in I}\Pi_{X_i} (\sfG)$. Then, since  
    \begin{align*}
        \| \sfG_{\bm{x}} - \sfG^{\Pi}_{\bm{x}} \|_{(s),\sU,C^k,\bm{x},\nu,\mu} \leq {}&{}
        \| \sfG_{\bm{x}} - \Pi_{\bm{x},\mathbbl{\Delta}}(\sfG_{\bm{x}}) \|_{(s),\sU,C^k,\bm{x},\nu,\mu}\\
        \leq {}&{}
        2f_{\nu,\mu}(s\mathbbl{\Delta})^{-1}
        f_{\nu,\mu_1}(s\mathbbl{\Delta}) 
        \cdot \vvert \sfG \vvert_{(s),\sU,C^k,\nu,\mu_1},  
    \end{align*}
    we have $\vvert \sfG - \sfG^{\Pi}\vvert _{(s),\sU,C^k,\nu,\mu} \leq 2f_{\nu,\mu}(s\mathbbl{\Delta})^{-1}
        f_{\nu,\mu_1}(s\mathbbl{\Delta}) \cdot \vvert \sfG\vvert_{(s),\sU,C^k,\nu,\mu_1}$. Applying this to $\sfG=\sfG_{\sfH}$, by \cref{prp:Lieb.Robinson.Ck} we obtain that 
    \begin{align}
    \begin{split}
         {}&{} \vvert  \alpha_{(p\,; s)}(\sfG_{\sfH} \,; t)  (\sfh) - \alpha_{(p\,; s)} (\sfG_{\sfH}^{\Pi} \,; t)(\sfh) \vvert_{(s),\sU,C^k,\nu,\mu} \\
        \leq {}&{}  \Upsilon_{\sfG_{\sfH},\sfG_{\sfH}^{\Pi},\nu_k,\mu_{{3k+1}}}^{\mathrm{rel},(k)}(t) \cdot \vvert \sfG_{\sfH} - \sfG_{\sfH}^\Pi \vvert _{(s),\sU,C^k,\nu_{2k}, \mu_{6k+2}} \\
        \leq {}&{} 2 \Upsilon_{\sfG_{\sfH},\sfG_{\sfH}^{\Pi},\nu_k,\mu_{{3k+1}}}^{\mathrm{rel},(k)}(t) \cdot 
        \frac{
        f_{\nu_{2k},\mu_{6k+3}}(s\mathbbl{\Delta})
        }{
        f_{\nu_{2k},\mu_{6k+2}}(s\mathbbl{\Delta})
        }
        \cdot \vvert \sfG_{\sfH}\vvert_{(s),\sU,C^k,\nu_{2k},\mu_{6k+3}} =: \varrho_{(s),k,\nu,\mu}, 
    \end{split}\label{eqn:sumability.cutoff}
    \end{align}
    which decays as $s \to \infty$.

    Since the ground state of the IG UAL Hamiltonian $\alpha_{(p\,; s)}(\sfG_{\sfH} \,; 1)(\sfh)$ agrees with that of $\sfH(1) \in \Im (c)$, say $\omega_p$, the UAL Hamiltonian
    \[
    \sfH' (p\,; s) \coloneqq \prod_{i \in I}\Theta_{X_i^c, \omega_p} \Big(  \alpha_{(p\,; s)}(\sfG_{\sfH} \,; 1)(\sfh) \Big)
    \]
    is also gapped above the ground state $\omega$ and invertible. By \eqref{eqn:sumability.cutoff} and \eqref{eqn:redestribution.2}, we have 
    \begin{align}
    \begin{split}
    {}&{}\vvert \sfH'- \alpha(\sfG_{\sfH} \,; 1)(\sfh) \vvert_{(s),\sU,C^k,\nu,\mu} \\
    \leq {}&{} \Bigvvert  \prod_{i \in I}\Theta_{X_i^c, \omega_p} \Big(  \alpha(\sfG_{\sfH} \,; 1)(\sfh) \Big) - \prod_{i \in I}\Theta_{X_i^c, \omega_p} \Big(  \alpha(\sfG_{\sfH}^{\Pi} \,; 1)(\sfh) \Big) \Bigvvert_{(s),\sU,C^k,\nu,\mu} \\
    {}&{} + \bigvvert \alpha(\sfG_{\sfH} \,; 1)(\sfh)-\alpha(\sfG_{\sfH}^{\Pi} \,; 1)(\sfh)  \bigvvert_{(s),\sU,C^k,\nu,\mu}\\
    \leq {}&{}    \| \omega \|_{(s),C^k,\nu,\mu} \cdot g_{\nu,\mu} \cdot \varrho_{(s),k,\nu,\mu_1} +\varrho_{(s),k,\nu,\mu}
    \end{split}\label{eqn:summability.cutoff.2}
    \end{align}
     for sufficiently large $\mu <1$ so that $\| \omega\|_{(s),C^k,\nu,\mu}<\infty$, which decays as $s \to \infty$ as well.

    Finally, we construct a smooth homotopy of a given section $\sfH$ and a smooth section of the subsheaf $\cP_{\sfh} \prod_{i \in I}^u\sIP_{\loc}(X_i \midbar \sM) \subset \mathrm{hofib}(c)$, which is weakly contractible. This show that $\sfH$ is null-homotopic as follows. 
    First, we replace $\sfH$ with another path by homotopies 
    \begin{align}
    \begin{split}
    \sfH (p,t\,; s) \sim {}&{} 
    \begin{cases}
        \sfH(p,2t\,; s) & 0 \leq t \leq 1/2, \\
        (2-2t)\sfH(p,1\,; s) +(2t-1)\sfH'(p\,; s) & 1/2 \leq t \leq 1,
    \end{cases}
     \\
    \sim {}&{}  
    \begin{cases}
        \alpha_{(p\,; s)}(\sfG_{\sfH}\,; 2t)(\sfh) & 0 \leq t \leq 1/2, \\
        (2-2t)\alpha_{(p\,; s)}(\sfG_{\sfH}\,; 1)(\sfh) +(2t-1)\sfH'(p\,; s) & 1/2 \leq t \leq 1.
    \end{cases}
    \end{split}
    \label{eqn:summability.cutoff.homotopy}
    \end{align}
    By a parameter shift if necessary (cf.\ \cref{rmk:parameter.shift}), \eqref{eqn:sumability.cutoff} and \eqref{eqn:summability.cutoff.2} enables us to apply \cref{thm:gap.stability} to conclude that the convex combination 
    \[
    \begin{cases}
     (1-v)  \alpha_{(p\,; s)}(\sfG_{\sfH} \,; t)  (\sfh) +v \alpha_{(p\,; s)}(\sfG_{\sfH}^\Pi \,; 2t)(\sfh ) & 0 \leq t \leq 1/2, \\
     (1-v) \Big( (2-2t)\alpha_{(p\,; s)}(\sfG_{\sfH}\,; 1)(\sfh) +(2t-1)\sfH'(1)\Big) + v \alpha_{(p\,; s)}(\sfG_{\sfH}^{\Pi} \,; 1)(\sfh) & 1/2 \leq t \leq 1,
    \end{cases}
    \]
    is a smooth homotopy of IG localization flows with a uniform spectral gap $1/2$, which connects the right hand side of \eqref{eqn:summability.cutoff.homotopy} and the path
    \[
    \begin{cases}
        \alpha_{(p\,; s)}(\sfG_{\sfH}^{\Pi}\,; 2t)(\sfh) & 0 \leq t \leq 1/2, \\
        \alpha_{(p\,; s)}(\sfG_{\sfH}^{\Pi} \,; 1)(\sfh) & 1/2 \leq t \leq 1,
    \end{cases}
    \in \cP_{\sfh}  {\prod_{i \in I}}^u \ \sIP_{\loc}(X_i \midbar \sM). 
    \]
    This finishes the proof.
\end{proof}

\subsection{The Spanier--Whitehead duality}
Finally, we relate the generalized homology theory of IG localization flows with the one associated with the $\Omega$-spectrum $\IP_*$. 
Let us recall that, for an $\Omega$-spectrum $E_*$ and a finite CW-complex $X$, the $E$-homology group $E_n(X)$ is defined by the stable homotopy group $\pi_n^{\mathrm{st}}(X_+ \wedge E) \coloneqq \colim_{k \to \infty}\pi_{n+k}(X_+ \wedge E_k)$. 
The goal of this subsection is that there is a weak equivalence of $\Omega$-spectra 
\begin{align*}
    \IP_{\loc}(X) \simeq \Omega^{\infty} (X_+ \wedge \IP) \coloneqq \colim_{n \to \infty} \Omega^{k}(X_+ \wedge \IP_{k}),
\end{align*}
that is, there is a natural isomorphism of homology functors
\begin{align*} 
    \pi_n(\IP_{\loc }(X) ) \cong \pi_n^{\mathrm{st}}(X_+ \wedge \IP_*) = \colim_{k \to \infty} \pi_{n+k} (X_+ \wedge \IP_k). 
\end{align*}

To this end, we restrict the metric space $X$ to finite CW-complexes embedded to $\bR^l$. 
Then there is a deformation retract neighborhood $U_X$ of $X$. We may assume $U_X$ is the interior of an embedded manifold with boundary $\overline{U}_X$, which is also a deformation retract of $X$. 
We deal with the pair of topological spaces
\[
    \sD_X \coloneqq (\overline{U}_X, \partial \overline{U}_X) \in \CW^2.
\]
As is noted in \cref{subsubsection:definition.sheaf}, it also serves as an input of a sheaf on $\Man$.
The Dold--Puppe version of the Spanier--Whitehead duality (\cite{doldDualityTraceTransfer1983}, we also refer the reader to  \cite{rudyakThomSpectraOrientability1998}*{Example II.2.8 (b)}) shows that there is an isomorphism of homotopy groups 
\begin{align*} 
    \rIP^n(\sD_X) =[ \sD_X, \IP_{n}] \cong \pi_{l-n}^{\mathrm{st}}( X_+ \wedge \IP) =\rIP_{l-n}(X). 
\end{align*}
When $X$ is a closed submanifold of $\bR^n$, the pointed space $\sD_X$ coincides with the Thom space $\mathrm{Th}(\nu X)$ of the normal bundle of $X$, and the above isomorphism is nothing else than the Atiyah duality \cite{atiyahThomComplexes1961}. 

\begin{thm}\label{thm:Atiyah.duality}
Let $X \subset \bR^{l}$ be an embedded finite CW-complex, and let $\sD_X$ be as above. 
Then, there is a weak equivalence of sheaves
\begin{align*} 
    \sIP_{\loc}(X) \simeq \sHom (\sD_X , \sIP_{\loc,l} ) \simeq \sHom (\sD_X , \sIP_{l} ).  
\end{align*}
\end{thm}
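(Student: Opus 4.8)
The plan is to deduce the second weak equivalence from \cref{thm:scaleable} and to prove the first by exhibiting a natural comparison morphism and running a Mayer--Vietoris induction over the cells of $X$.

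For the second equivalence, observe that $\bR^l$ is scaleable (via $S_{\bR^l}(\bm{x}) = \bm{x}/2$ and the coarse homotopy of \cref{exmp:Lipschitz.homotopy}), so $\mathtt{ev}_1 \colon \sIP_{\loc,l} = \sIP_{\loc}(\bR^l) \to \sIP(\bR^l) = \sIP_l$ is a weak equivalence by \cref{thm:scaleable}. Since $\sD_X$ is a finite CW pair (the interior of an embedded manifold with boundary together with its boundary), the functor $\sHom(\sD_X,\blank)$ carries weak equivalences of sheaves to weak equivalences: by \cref{thm:MW} one has $\sHom(\sD_X,\sF)[\sM] \cong [\sM,\Map_*(\sD_X,|\Sing\sF|)]$, and $\Map_*(\sD_X,\blank)$ preserves weak equivalences of spaces. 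Hence $\sHom(\sD_X,\sIP_{\loc,l}) \simeq \sHom(\sD_X,\sIP_l)$.

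For the first equivalence, the idea is to view both sides as sheaf-valued generalized homology theories of the finite complex $X$ and compare them via a sheaf-theoretic refinement of the Atiyah duality map. On the left, for a pair $(X,A) \in \ECW_{\fin}^2$ the assignment $n \mapsto \sIP_{\loc,n}(X,A)$ is an $\Omega$-spectrum object whose associated bivariant functor is a homology theory in $(X,A)$ by \cref{thm:IP.bivariant}, with Mayer--Vietoris sequences (\cref{cor:mayer-vietoris}), excision (\cref{prp:excision}) and additivity (\cref{prp:additivity}). On the right, fixing $X \subset \bR^l$ with a Lipschitz tubular neighborhood $\overline{U}_X$ and a Lipschitz retraction $\pi \colon \overline{U}_X \to X$, the assignment $(X,A) \mapsto \sHom(\sD_{X/A},\sIP_{\loc,l})$ inherits a homology-theory structure from the Spanier--Whitehead duality \cite{doldDualityTraceTransfer1983}: it sends the cofibre sequence $A_+ \to X_+ \to X/A$ of finite complexes to a fibre sequence of duals, hence --- after applying the exact functor $\sHom(\blank,\sIP_{\loc,l})$ --- to a fibration of sheaves, and excision and additivity come from the corresponding properties of stable homotopy of finite complexes. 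The comparison morphism $\Phi_X \colon \sIP_{\loc}(X) \to \sHom(\sD_X,\sIP_{\loc,l})$ is a Pontryagin--Thom-type construction: given a smooth family $\sfH$ of IG localization flows on $X$ over $\sM$ and a point $\bm{u} \in \overline{U}_X$, push $\sfH$ forward along $\iota \colon X \hookrightarrow \bR^l$ to an IG localization flow on $\bR^l$ supported near $X$ (\cref{prp:localizing.induced.hom}), translate it by $\bm{u} - \pi(\bm{u})$, and truncate it to a ball of radius comparable to $\mathrm{dist}(\bm{u},\partial\overline{U}_X)$ using the adiabatic-interpolation and truncation machinery of \cref{thm:interpolation.loop,lem:cut.diffused}, so that the result is trivial whenever $\bm{u} \in \partial\overline{U}_X$ and, by coarse-homotopy invariance (\cref{cor:localizing.coarse.homotopy}), depends on the chosen data only up to canonical homotopy. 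One checks that $\Phi_X$ is natural in Lipschitz maps, descends to pairs, and intertwines the localized Kitaev pumps $\boldsymbol{\kappa}^{\loc}$ on the two sides up to homotopy, so that $\Phi$ is a natural transformation of homology theories. To conclude that $\Phi_X$ is a weak equivalence, the exact sequences and the five lemma reduce the claim, via the skeletal filtration of $X$, to the additivity axioms and the suspension isomorphisms (the $\Omega$-spectrum structure maps), and finally to the single case $X = \pt$; there $\sD_\pt = (\bar B^l, S^{l-1})$, so $\sHom(\sD_\pt,\sIP_{\loc,l})$ realizes to $\Omega^l\IP_{\loc}(\bR^l) \simeq \IP_{\loc}(\bR^0) = \IP_{\loc}(\pt)$ since $\{\IP_{\loc}(\bR^d)\}_d$ is an $\Omega$-spectrum by \cref{cor:spectrum.localizing}, while $\pt$ is scaleable so $\sIP_{\loc}(\pt) \simeq \sIP(\pt)$ by \cref{thm:scaleable}, and one verifies directly that $\Phi_\pt$ realizes this identification.

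The main obstacle I expect is the construction of $\Phi_X$ at the level of sheaves: one must arrange the truncation and any rescaling so that the resulting assignment genuinely lands in $\sGP_{\loc}$, i.e.\ that the localized $C^k$-seminorm bounds \eqref{eqn:localizing.norm} are preserved uniformly in the parameter $\bm{u}$ and in the localization variable $s$ --- including near $\partial\overline{U}_X$, where the controlling lengths $\mathrm{dist}(\bm{u},\partial\overline{U}_X)$ degenerate --- that invertibility is preserved, and that the construction is functorial enough to commute with the Mayer--Vietoris boundary maps. Every individual ingredient (push-forward along Lipschitz maps, coarse homotopy invariance, truncation via adiabatic connections, and the Eilenberg swindle) is already available, but assembling them into a morphism of $\Omega$-spectrum sheaves that is natural in $X$ is where the real work lies.
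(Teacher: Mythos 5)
Your proposal takes a genuinely different route from the paper on the central first equivalence, and the comparison is worth spelling out.

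The paper does not build a Pontryagin--Thom type map $\Phi_X \colon \sIP_{\loc}(X) \to \sHom(\sD_X,\sIP_{\loc,l})$; it constructs a morphism in the \emph{opposite} direction, $\Gamma_X^{(l)} \colon \sIP_{\loc}[\bR^l \midbar \sM \times \sD_X] \to \sIP_{\loc}[X\midbar\sM]$, by iterating a ``slicing'' procedure: mesh $[0,1]^l$ at scale $\epsilon$, use the adiabatic connection $\sfG_{\sfH}$ in the $x_l$-direction (and its half-space truncation) to compress the family from $[0,1]^l$ onto $[0,1]^{l-1}$, control the error by mesh-approximation estimates (\cref{lem:mesh.approximation,lem:truncate.state.fix,lem:gradation.state.estimate}), and iterate $l$ times. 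The resulting localization flow is supported in the thickened neighborhood $U_{X,2l\epsilon}$, hence on $X$ up to Lipschitz equivalence. The paper then proves the diagram \eqref{eqn:diagram.commute.duality} commutes for inclusions of subcomplexes, reduces to $X = \pt$ (\cref{lem:duality.trivial}, where $\Gamma_{\pt}^{(l)}$ is $l$ applications of the adiabatic interpolation, already known to induce an isomorphism), and runs the Mayer--Vietoris argument, exactly as you propose. Finally it upgrades from isomorphism of homotopy sets to a weak equivalence of sheaves via an auxiliary subsheaf $\sHom(\sD_X,\sIP_{\loc,l})_{\spadesuit}$ and the parameter shift of \cref{rmk:parameter.shift} --- a step your sketch omits but which is needed regardless of which direction the comparison goes.

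Your choice of direction is not wrong in principle --- either direction of a natural transformation of homology theories suffices if it is an isomorphism on points --- but, as you yourself flag, the hard part of your construction is precisely where the paper's choice pays off. Your $\Phi_X$ requires a truncation radius that degenerates to zero at $\partial\overline{U}_X$, and one would have to show the localized $C^k$-seminorms \eqref{eqn:localizing.norm} stay uniformly bounded (and the family stays smooth and invertible) through that degeneration, with the truncation radius and the localization parameter $s$ interacting. The paper's $\Gamma_X^{(l)}$ goes in the ``compression'' direction, where the error is controlled by the fixed mesh scale $\epsilon$ against the shrinking localization scale $1/s$; the estimates are uniform and the degenerate-boundary problem never arises. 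Your sketch is a plausible outline of an alternative, but the technical core --- that $\Phi_X$ really lands in $\sGP_{\loc}$, respects invertibility, and is natural enough to commute with Mayer--Vietoris boundaries and with $\boldsymbol{\kappa}^{\loc}$ --- is left unproved, and it is precisely the part that the paper's argument is designed to make tractable. Your treatment of the second equivalence (via \cref{thm:scaleable} and compactness of $\sD_X$) matches the paper's and is correct.
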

The second weak equivalence follows from \cref{thm:scaleable}. We prove the first.

In the following, given a smooth family $\sfH$ of invertible localization flows, we construct a new Hamiltonian that has a ground state close to that of $\sfH(\bm{x})$ on a neighborhood of $\bm{x} \in \Lambda \cap \overline{U}_X$. 
This construction is similar to what is called a modulating Hamiltonian in \cite{yaoModulatingHamiltonianApproach2024}, but only the ground state is modulated in our construction.

\begin{lem}
    Let $\sH_1, \cdots,\sH_n$ be Hilbert spaces equipped with a fixed unit vector $\Omega_j \in \sH_j$, let $\xi \in \sH_1 \otimes \cdots \otimes \sH_n$ be another unit vector, and let $\varepsilon >0$ be sufficiently small. Assume that the vector state $\langle \blank \xi,\xi\rangle$ satisfies that 
    \begin{align*}
    \| \langle \blank \xi,\xi \rangle |_{\cB(\sH_j)} - \langle \blank \Omega_j,\Omega_j\rangle \| <\varepsilon/4n
    \end{align*}
    for any $j=1,\cdots,n$. Then we have $\| \langle \blank \xi,\xi \rangle- \langle \blank \Omega,\Omega\rangle \| <\varepsilon $, where $\Omega \coloneqq \Omega_1 \otimes \cdots \otimes \Omega_n$. 
\end{lem}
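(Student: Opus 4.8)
The statement is a quantitative continuity-type estimate comparing the product vector state $\langle \blank\,\Omega,\Omega\rangle$ with $\langle\blank\,\xi,\xi\rangle$, given that the one-body marginals are close. The plan is to proceed by a telescoping (hybrid) argument over the $n$ tensor factors. For $0\le j\le n$, let $\eta_j \coloneqq \Omega_1\otimes\cdots\otimes\Omega_j\otimes \rho_j$, where $\rho_j$ is obtained from $\xi$ by ``replacing'' the first $j$ factors with $\Omega_1,\dots,\Omega_j$; more precisely, $\eta_0 = \xi$ and $\eta_n = \Omega$, and one interpolates so that consecutive $\eta_{j-1},\eta_j$ differ only in the $j$-th factor. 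The key inequality is the triangle inequality $\|\langle\blank\,\xi,\xi\rangle - \langle\blank\,\Omega,\Omega\rangle\| \le \sum_{j=1}^n \|\langle\blank\,\eta_{j-1},\eta_{j-1}\rangle - \langle\blank\,\eta_j,\eta_j\rangle\|$, so it suffices to bound each summand by $\varepsilon/n$.

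For a single step, I would use the Schmidt decomposition of $\xi$ across the cut $\sH_j$ versus its tensor complement: write $\xi = \sum_i \lambda_i\, e_i \otimes f_i$ with $\{e_i\}\subset\sH_j$, $\{f_i\}$ in the complement, $\sum_i\lambda_i^2 = 1$. The reduced density matrix on $\cB(\sH_j)$ is $\sigma_j = \sum_i \lambda_i^2\, e_i\otimes e_i^*$, and the hypothesis says $\|\sigma_j - \Omega_j\otimes\Omega_j^*\|_1 < \varepsilon/4n$ (using that trace-norm distance of density matrices equals twice the norm distance of the induced states, up to the convention; in any case bounded by a constant times $\varepsilon/n$). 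This forces $\langle\sigma_j\,\Omega_j,\Omega_j\rangle = \sum_i\lambda_i^2|\langle e_i,\Omega_j\rangle|^2 \ge 1 - \varepsilon/4n$, exactly as in the computation at the start of the proof of \cref{lem:asymp.equal.state.trivial} in the excerpt. Projecting the $\sH_j$-leg of $\xi$ onto $\bC\cdot\Omega_j$ and renormalizing therefore changes the vector by at most a constant multiple of $\sqrt{\varepsilon/n}$ in norm; since a state depends Lipschitz-continuously (with constant $2$) on the implementing unit vector, the corresponding states differ by at most a constant times $\sqrt{\varepsilon/n}$. Summing over $j$ gives a bound of order $n\cdot\sqrt{\varepsilon/n} = \sqrt{n\varepsilon}$, which is $<\varepsilon$ once $\varepsilon$ is small relative to $1/n$ — hence the phrase ``$\varepsilon$ sufficiently small'' in the hypothesis.

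Concretely, the steps in order are: (1) fix $j$ and take the Schmidt decomposition of $\xi$ across $\sH_j$ vs.\ the complement; (2) convert the hypothesis $\|\langle\blank\,\xi,\xi\rangle|_{\cB(\sH_j)} - \langle\blank\,\Omega_j,\Omega_j\rangle\| < \varepsilon/4n$ into the scalar estimate $\sum_i\lambda_i^2|\langle e_i,\Omega_j\rangle|^2 \ge 1 - c\varepsilon/n$ via $|\langle(\sigma_j - p)\Omega_j,\Omega_j\rangle| \le \|\sigma_j - p\|_1$; (3) estimate $\|\xi - \xi_j\|$, where $\xi_j$ is the normalized projection of $\xi$ onto $\sH_j$'s $\Omega_j$-line tensored with the complement, obtaining $\|\xi - \xi_j\| \le c'\sqrt{\varepsilon/n}$ exactly as in \cref{lem:asymp.equal.state.trivial}; (4) note $\xi_j$ is (up to the chosen factorization) the vector $\eta_j$, so $\|\langle\blank\,\eta_{j-1},\eta_{j-1}\rangle - \langle\blank\,\eta_j,\eta_j\rangle\| \le 2\|\eta_{j-1}-\eta_j\| \le c''\sqrt{\varepsilon/n}$; (5) sum the telescoping bound and absorb the constants, using that $\varepsilon$ is small enough that $c''n\sqrt{\varepsilon/n} = c''\sqrt{n\varepsilon} < \varepsilon$.

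The main obstacle is bookkeeping rather than conceptual: one must be careful that after projecting the $j$-th leg of $\xi$ the remaining state on $\cB(\sH_k)$ for $k>j$ is still close to $\langle\blank\,\Omega_k,\Omega_k\rangle$, so that the hypothesis can be reused at the next step — but since projecting one leg and renormalizing perturbs the global vector by only $O(\sqrt{\varepsilon/n})$, all marginals move by at most $O(\sqrt{\varepsilon/n})$, which is still $\ll \varepsilon/4n$ is \emph{not} automatic; this is precisely why the argument is organized as a single telescoping sum against the \emph{fixed} reference $\Omega$ rather than an iteration that updates $\xi$. In other words, each step $j$ only needs the hypothesis on the original $\xi$ (which is given), and the error accumulates additively; the ``sufficiently small $\varepsilon$'' hypothesis is exactly what makes $\sqrt{n\varepsilon}<\varepsilon$. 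I expect the whole proof to be three or four lines once the lemma from \cref{lem:asymp.equal.state.trivial} is invoked verbatim for each factor.
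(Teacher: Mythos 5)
Your telescoping scheme has a genuine quantitative gap at the final step, and the inequality you invoke there goes in the wrong direction. Each hybrid step contributes a vector-norm error of order $\sqrt{\varepsilon/n}$, so the accumulated error is $n\cdot\sqrt{\varepsilon/n}=\sqrt{n\varepsilon}$. But $\sqrt{n\varepsilon}<\varepsilon$ is equivalent (up to constants) to $\varepsilon>n$, i.e.\ $\varepsilon$ \emph{large}, not small; as $\varepsilon\to 0$ the ratio $\sqrt{n\varepsilon}/\varepsilon=\sqrt{n/\varepsilon}$ diverges. So ``$\varepsilon$ sufficiently small'' works against you, not for you. The culprit is that you sum $n$ quantities that are each already square roots, paying a factor of $\sqrt{n}$ relative to first summing the squared quantities and taking a single square root at the end.

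The paper's proof does exactly the latter. Expand $\xi=\sum\lambda_{i_1\cdots i_n}\,e_{i_1}^1\otimes\cdots\otimes e_{i_n}^n$ in a fixed product orthonormal basis with $e_1^j=\Omega_j$. The hypothesis on the $j$-th marginal gives $\sum_{i_j\neq 1}|\lambda_{i_1\cdots i_n}|^2 = 1-\langle(\Omega_j\otimes\Omega_j^*)\xi,\xi\rangle<\varepsilon/4n$, and since every multi-index other than $(1,\ldots,1)$ has some coordinate $\neq 1$, a union bound gives $1-|\lambda_{1\cdots 1}|^2<\varepsilon/4$, uniformly in $n$. Only then is $\|\xi-\Omega\|$ extracted by a single square root and converted to a state-norm distance. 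Your per-factor Schmidt computation in step (2) is exactly the right local ingredient, but the $j$'s must be combined at the level of the squared coefficients, before extracting the square root, not after. There is also a secondary issue you noticed but dismissed too quickly: the hypothesis controls the marginals of $\xi$ only, not of the intermediate hybrids $\eta_{j-1}$, and that gap needs real quantitative control; the global expansion sidesteps it entirely because there are no intermediate hybrids.
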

\begin{proof}
    For each $j=1,\cdots,n$, take a complete orthonormal system $e_1^j,e_2^j,\cdots$ of $\sH_j$ such that $e_1^j=\Omega_j$. Let us represent $\xi$ as
    \begin{align*} 
    \xi = \sum_{i_1,\cdots,i_n} \lambda_{i_1\cdots i_n} e_{i_1}^1 \otimes \cdots \otimes e_{i_n}^n. 
    \end{align*}
    Then, the assumption implies that 
    \begin{align*}
    \sum_{i_j\neq 1} |\lambda_{i_1\cdots i_n}|^2 = \langle (1 - \Omega_j \otimes \Omega_j^*) \xi,\xi\rangle < \varepsilon /4n
    \end{align*}
    for any $j=1,\cdots,n$, and hence
    \begin{align*}
     1-\lambda_{1,\cdots,1}^2 = \sum_{(i_1,\cdots,i_n) \neq (1,\cdots,1)} |\lambda_{i_1\cdots i_n}|^2 \leq \sum_{j=1}^n \sum_{i_j \neq 1} |\lambda_{i_1\cdots i_n}|^2 <\varepsilon/4. 
    \end{align*}
    Now, we have
    \begin{align*}
    {}&{} \|\langle \blank \xi,\xi\rangle - \langle \blank \Omega_1\otimes \cdots \otimes \Omega_n ,\Omega_1 \otimes  \cdots \otimes \Omega_n \rangle \| \leq   2\| \xi - \Omega_1 \otimes \cdots \otimes \Omega_n\| ^2\\
    \leq {}&{} 2\Big( (1 - \lambda_{1,\cdots,1})^2 + \sum_{(i_1,\cdots,i_n) \neq (1,\cdots,1)} |\lambda_{i_1\cdots i_n}|^2 \Big)
    = 4-4\lambda_{1,\cdots,1} \\
    \leq {}&{} 4(1-\sqrt{1-\varepsilon/4}) <\varepsilon
    \end{align*}
    for sufficiently small $\varepsilon >0$. 
\end{proof}

Without loss of generality, we may assume that $U_X$ is contained in the unit cube $[0,1]^{l}$. 
Fix the length scale $\epsilon = 1/m$ and mesh the unit cube with this width. 
For $\bm{k} \in (\epsilon \bZ)^{l}$, we consider the cubes
\begin{align*}
    Q_r(\bm{k}) = Q_{r}(k_1,\cdots,k_{l}) \coloneqq [k_1 -r , k_1+ \epsilon +r] \times \cdots \times [k_l -r , k_{l}+ \epsilon +r]
\end{align*}
and $Q(\bm{k}) \coloneqq Q_0(\bm{k})$. The length scale $\epsilon$ is chosen so that the meshed neighborhood 
\begin{align*}
    U_{X,2l\epsilon } \coloneqq \bigcup_{Q_{2l\epsilon}(\bm{k}) \cap U_X \neq \emptyset }Q(\bm{k})
\end{align*}
is a deformation retract neighborhood of $\overline{U}_X$. We may further assume that $\mathrm{dist}(\overline{U}_X, \partial [0,1]^l) \geq \epsilon$. 

\begin{lem}\label{lem:mesh.approximation}
    Let $\sfG$ be a smooth $1$-form on $[0,1]$ with coefficient in $\mathfrak{LD}_{\mathbbl{\Lambda}}^{\al}$. Then, there is a constant $J_{\sfG,\nu,\mu}$ depending only on $\mathbbl{\Lambda}$, $\nu$, $\mu$, and the almost local norms of $\sfG$, such that the following holds: Let $r_0, r_1 , t_0, t_1 \in [0,1]$ such that $t_0 <t_1$ and $r_1 -r_0 > \epsilon$. Then, for any $a \in \cA_{Q_{r_0}(\bm{k})}$ and $Y \supset Q_{r_1}(\bm{k}) $, we have
    \begin{align*}
        \big\| \alpha (\sfG \,; t_0,t_1)(a) - \Pi_{Y} \big( \alpha (\sfG \,; t_0,t_1)(a) \big) \big\| \leq {}&{} J_{\sfG , \nu,\mu}  \cdot f_{\nu,\mu} ( s (r_1-r_0)) \cdot  \| a \|, \\
        \big\| \alpha( \sfG \,; t_0,t_1)(a)-\alpha(\Theta_{Y , \omega_0} (\sfG)\,; t_0,t_1) (a)  \big\| \leq {}&{} J_{\sfG, \nu,\mu} \cdot f_{\nu,\mu} (s (r_1-r_0)) \cdot  \| a\|.
    \end{align*}
\end{lem}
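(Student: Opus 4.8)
The statement is a "meshed" localization-scale version of the Lieb--Robinson estimates of \cref{prp:Lieb.Robinson} (1) and the truncation estimate \eqref{eqn:Lieb.Robinson.truncate}, applied to a cube $Q_{r_0}(\bm{k})$ sitting inside a larger cube $Q_{r_1}(\bm{k})$ with $r_1-r_0>\epsilon$. The key geometric observation is that, although $Q_{r_0}(\bm{k})$ is not a ball, it is sandwiched between two balls: $Q_{r_0}(\bm{k}) \subset B_{c(1+r_0)}(\bm{x}_{\bm{k}})$ for some center $\bm{x}_{\bm{k}}$ and constant $c$ depending only on $l$, and any point of $\Lambda_s$ outside $Q_{r_1}(\bm{k})$ is at distance at least $(r_1-r_0)/\sqrt{l}$ from $Q_{r_0}(\bm{k})$ in the $\rmd_s$-metric (up to the uniform factor $s$). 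So the plan is first to reduce the cube statement to the ball statement by these two inclusions, quantified so that all constants are controlled by $\mathbbl{\Lambda}$ (its equi-polynomial growth constants $\kappa_{\mathbbl{\Lambda}}, l_{\mathbbl{\Lambda}}$ from \cref{lem:equi.polynomial.growth.inherit.2} and \cref{exmp:lattice.Rl}) and by the almost local norms $\vvert \sfG \vvert_{(s),\nu',\mu'}$ for suitable $\nu',\mu'$.

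Concretely, I would proceed as follows. First, write $\alpha(\sfG\,; t_0,t_1)$ as the parallel transport along $[t_0,t_1]$ as in the proof of \cref{prp:Lieb.Robinson} (3), so that \cref{prp:Lieb.Robinson} (1) applies verbatim with the interval $[0,1]$ replaced by $[t_0,t_1]$ (the relevant norm estimates are monotone in the length of the interval, so using $t_1-t_0\le 1$ only helps). Second, for the first inequality: take $X \coloneqq Q_{r_0}(\bm{k}) \cap \Lambda_s$ and note $X \subset B_{s\cdot c(1+r_0)}^{\rmd_s}(\bm{x}_{\bm{k}})$; since $Q_{r_1}(\bm{k}) \subset Y$, the complement $Y^c$ is contained in $\Lambda_s \setminus Q_{r_1}(\bm{k})$, and hence $\mathrm{dist}_s(X, Y^c) \ge s(r_1-r_0)/\sqrt{l}$. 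Then \cref{prp:Lieb.Robinson} (1) (with the $\rmd_s$ metric, whose growth-rate function is bounded by $\kappa(1+sr)^{l_X}$ uniformly in $s$, by \cref{exmp:lattice.Rl}) gives
\[
\| \alpha(\sfG\,; t_0,t_1)(a) - \Pi_Y(\alpha(\sfG\,; t_0,t_1)(a)) \| \le 2 D_{\nu',\mu'}(X, Y^c) \cdot (\exp(2 C_{\nu'} L_{\nu',\mu'} \vvert \sfG \vvert_{(s),\nu'_3,\mu'_1}) - 1) \cdot \| a \|,
\]
and the bound on $D_{\nu',\mu'}(X,Y^c)$ recorded just before \cref{prp:Lieb.Robinson} absorbs the polynomial factor $\# X$ into $f_{\nu'-l_X-1,\mu'}(\mathrm{dist}_s(X,Y^c))$. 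Choosing $\nu' \coloneqq \nu + l_X + 1$ and $\mu'$ close to $\mu$ (using \eqref{eqn:const.small.c} and \eqref{eqn:large.scale.Lipschitz.F} to reconcile $f_{\nu',\mu'}$ applied to $s(r_1-r_0)/\sqrt{l}$ with $f_{\nu,\mu}(s(r_1-r_0))$, which contributes only a constant depending on $l$) yields the desired form with $J_{\sfG,\nu,\mu}$ independent of $s, r_0,r_1,t_0,t_1,\bm{k}$. Third, for the second inequality: apply \eqref{eqn:Lieb.Robinson.truncate} in the same way, with the same choice of $X$ and the observation that $\Theta_Y(\sfG)$ differs from $\Pi_{Y^c}$-type truncation only by the (bounded) redistribution of \cref{prp:cut.trivial.Hamiltonian}, or alternatively invoke \cref{prp:Lieb.Robinson.Ck} (2) with $k=0$ comparing $\alpha(\sfG\,;t_0,t_1)$ and $\alpha(\Theta_Y(\sfG)\,;t_0,t_1)$ together with \cref{rmk:cutoff.approximate}, which bounds $K_{Y^c, \sfG - \Theta_Y(\sfG),\nu',\mu'}^{(0)}$ and hence $\vvert \sfG - \Theta_Y(\sfG) \vvert_{(s),\nu',\mu'}$ restricted to the region near $X$ by a constant times $f_{\nu',\mu'}(s(r_1-r_0))$.

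The main obstacle is bookkeeping of the $s$-dependence: every estimate from \cref{prp:Lieb.Robinson} and \cref{prp:Lieb.Robinson.Ck} must be invoked for $\mathtt{ev}_s\sfG$ with respect to the rescaled metric $\rmd_s$, and one has to check—exactly as in the proof of \cref{lem:auto.localizing.path}—that the constants $C_{\nu'}$, $L_{\nu',\mu'}$, $\Upsilon_{(s),\sfG,\nu',\mu'}$, and the growth constants $\kappa, l_X$ are all bounded uniformly in $s\in[1,\infty)$, because each depends only on the almost local norms $\vvert \sfG\vvert_{(s),\nu',\mu'}$ (which are $\le \vvert \sfG\vvert_{\loc,\nu',\mu'}<\infty$ by the definition of a localization flow in \eqref{eqn:localizing.norm}) and on the equi-polynomial growth data of $\mathbbl{\Lambda}$ (uniform by \cref{lem:equi.polynomial.growth.inherit.2} and \cref{exmp:lattice.Rl}). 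Once this uniformity is in place, defining $J_{\sfG,\nu,\mu}$ as the supremum over $s$ of the resulting prefactor is harmless, and the condition $r_1-r_0>\epsilon$ is used only to guarantee $Y^c$ is genuinely separated from $X$ (so that $\mathrm{dist}_s(X,Y^c)>0$) and that the cube inclusions are strict. The remaining arguments are routine substitutions of the kind already carried out repeatedly in \cref{section:analysis.spin}.
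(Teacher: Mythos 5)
Your proposal is correct and follows essentially the same route as the paper: both proofs bound $\#(Q_{r_0}(\bm{k})\cap\Lambda_s)$ via the equi-linear properness and equi-polynomial growth of $\mathbbl{\Lambda}$, apply the Lieb--Robinson estimate \eqref{eqn:Lieb.Robinson} (via \cref{prp:Lieb.Robinson}~(1)) and the truncation estimate \eqref{eqn:Lieb.Robinson.truncate} with respect to the rescaled metric $\rmd_s$, and absorb the resulting polynomial cardinality factor (here $(1+(\epsilon+2r_0)s)^{l}$) into $f_{\nu,\mu}(s(r_1-r_0))$ by raising $\nu$, which works uniformly in $s$ precisely because $r_1-r_0>\epsilon$ and $r_0\le 1$. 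One caveat: your alternative route (b) for the second inequality—invoking \cref{prp:Lieb.Robinson.Ck}~(2) with \cref{rmk:cutoff.approximate}—would not directly produce the required decay in $r_1-r_0$, because the bound in \cref{prp:Lieb.Robinson.Ck}~(2) is in terms of the global (sup over lattice sites) norm $\vvert\sfG-\Theta_Y(\sfG)\vvert$, which is $O(1)$ near $\partial Y$, whereas the decay you need comes from the spatial separation of $Q_{r_0}(\bm{k})$ from $\partial Y$; route (a), applying \eqref{eqn:Lieb.Robinson.truncate} directly as the paper does, is the one that actually propagates this separation.
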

\begin{proof}
Since the family $\pr_{\bR^l} \colon \Lambda_s \to \bR^l$ is equi-linearly proper, there is $C>0$ such that $\mathop{\mathrm{diam}}(Q_r(\bm{k}) ) = C \cdot (\epsilon + 2r)\cdot l^{1/2}$, and hence
\begin{align*}
    \# ( Q_r(\bm{k}) \cap \Lambda_s) \leq \kappa_{\mathbbl{\Lambda}} \cdot (1+ C(\epsilon +2r) \cdot l^{1/2} \cdot s)^{l} \leq \kappa_{\mathbbl{\Lambda}} l^{l/2}C^l \cdot (1+ (\epsilon +2r) s)^{l},
\end{align*}
and hence
\begin{align*}
     D_{\nu_2,\mu_1}(Q_{r_0}(\bm{k}),Y) \leq {}&{} D_{\nu_2,\mu_1}(Q_{r_0}(\bm{k}), Q_{r_1}(\bm{k})^c) \\
    \leq{}&{} 2^{l+1}\kappa_{\mathbbl{\Lambda}} \cdot \# (Q_{r_0}(\bm{k}) \cap \Lambda_s) \cdot  f_{\nu_1,\mu_1}(\mathrm{dist}_s(Q_{r_0}(\bm{k}),Q_{r_1}(\bm{k}))^c) \\
    \leq {}&{} 2^{l+1}\kappa_{\mathbbl{\Lambda}} \cdot \kappa_{\mathbbl{\Lambda}} l^{l/2}C^l  \cdot (1+ (\epsilon +2r_0) s)^{l} \cdot f_{\nu_1,\mu_1}(s(r_1-r_0)) \\
    \leq {}&{}  2^{l+1} \kappa_{\mathbbl{\Lambda}}^2 l^{l/2}C^l \cdot (1 + 2/\epsilon)^{l} \cdot  f_{\nu,\mu}(s(r_1-r_0) ). 
\end{align*}
Now the inequalities follow from \eqref{eqn:Lieb.Robinson.truncate} and \cref{prp:Lieb.Robinson} (1) by choosing the constant $J_{\sfG,\nu,\mu}$ as 
\begin{align*}
    J_{\sfG,\nu,\mu} \coloneqq 2^{l+2} \kappa_{\mathbbl{\Lambda}}^2 l^{l/2}C^l \cdot (1 + 2/\epsilon)^{l} \cdot  \exp\big( C_{\nu_2} L_{\nu_2,\mu_1}\vvert \sfG \vvert _{\nu_5,\mu_2} \big) \cdot \max \{ 1,C_{\nu_2} L_{\nu_2,\mu_1}\vvert \sfG \vvert _{\nu_5,\mu_2}\},
\end{align*}
which certainly depends only on $\nu$, $\mu$, $\mathbbl{\Lambda}$, and the almost local norms of $\sfG$. 
\end{proof}

Let $\sfH \in \sIP_{\loc}(\bR^l \midbar [0,1])$. We write $\omega_v \coloneqq \omega_{H(v)}$. When convenient, we use the same notation for the restriction of this state to any relevant subalgebra, with the meaning clear from the context.
For $k \in \epsilon \bZ$, let $Y_k \coloneqq \bR^{l -1} \times [k,\infty) $ and 
\begin{align*}
    \alpha_k \coloneqq \alpha \big(\sfG_{\sfH} \,; k , k+\epsilon \big), \quad   \gamma_{k} \coloneqq \alpha \big( \Theta_{Y_k ,\omega} (\sfG_{\sfH}) \,; k , k+\epsilon \big). 
\end{align*}
For $a,b \in \epsilon \bZ$, let 
    \begin{align*}
        \gamma_{[a,b]} \coloneqq \gamma_{b-\epsilon} \circ \gamma_{b-2\epsilon} \circ \cdots \circ \gamma_{a+\epsilon} \circ \gamma_a, \quad         \alpha_{[a,b]} \coloneqq \alpha_{b-\epsilon} \circ \alpha_{b-2\epsilon} \circ \cdots \circ \alpha_{a+\epsilon} \circ \alpha_a.
    \end{align*}
We are going to compare $\omega_0 \circ \gamma_{[0,1]}$ and $\omega_0 \circ \alpha _{[0,k_l]}$ on $\cA_{Q(\bm{k})}$.

\begin{lem}\label{lem:truncate.state.fix}
    Let $\sfH \in \sIP_{\loc} (\bR^l \midbar [0,1])$ satisfies that $\omega_{v} |_{\cA_{Q_{2\epsilon}(\bm{k})}} = \omega_0$ for any $v \in [k_l-\epsilon, k_l+2\epsilon]$. Then, for $k=k_l-\epsilon, k_l,k_l+\epsilon$ and $a \in \cA_{Q_{\epsilon}(\bm{k})}$, we have   
    \[
        |\omega_0 (\gamma_{k} (a)) - \omega_0( a )| \leq J_{\sfG_{\sfH},\nu,\mu} \cdot f_{\nu,\mu} (\epsilon s) \cdot \| a \|.
    \]
\end{lem}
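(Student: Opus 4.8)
The plan is to compare the truncated adiabatic step $\gamma_k=\alpha(\Theta_{Y_k}(\sfG_{\sfH})\,;k,k+\epsilon)$ with the full adiabatic step $\alpha_k=\alpha(\sfG_{\sfH}\,;k,k+\epsilon)$, writing
\[
\omega_0(\gamma_k(a))-\omega_0(a)=\bigl[\omega_0(\gamma_k(a))-\omega_0(\alpha_k(a))\bigr]+\bigl[\omega_0(\alpha_k(a))-\omega_0(a)\bigr],
\]
and bounding each bracket by a constant multiple of $f_{\nu,\mu}(\epsilon s)\,\|a\|$, the constant depending only on $\mathbbl{\Lambda}$, $\nu$, $\mu$ and the almost local $C^k$-norms of $\sfG_{\sfH}$ (these are themselves controlled, on each relatively compact chart, by \cref{lem:adiabatic.well-defined} and \cref{lem:auto.localizing.path}). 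Throughout, all quasi-locality estimates are applied at the rescaled metric $\rmd_s$ and kept uniform in $s\in[1,\infty)$ using the equi-polynomial growth of $\mathbbl{\Lambda}$ (\cref{lem:equi.polynomial.growth.inherit.2}, \cref{exmp:lattice.Rl}), exactly as in \cref{lem:mesh.approximation}.

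First I would handle the second bracket. For $a\in\cA_{Q_\epsilon(\bm k)}$ the operator $\alpha_k(a)$ spreads only over a flow time $\epsilon$, so by \cref{lem:mesh.approximation} it is within $J_{\sfG_{\sfH},\nu,\mu}f_{\nu,\mu}(\epsilon s)\|a\|$ of $\Pi_{Q_{2\epsilon}(\bm k)}(\alpha_k(a))$; on $\cA_{Q_{2\epsilon}(\bm k)}$ the hypothesis gives $\omega_{\sfH(k)}=\omega_0$, so $\omega_0(\alpha_k(a))=\omega_{\sfH(k)}(\alpha_k(a))$ up to a $f_{\nu,\mu}(\epsilon s)$-error. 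Then \cref{thm:automorphic.equivalence} gives $\omega_{\sfH(k)}\circ\alpha_k=\omega_{\sfH(k+\epsilon)}$, and since $a\in\cA_{Q_\epsilon(\bm k)}\subset\cA_{Q_{2\epsilon}(\bm k)}$ where $\omega_{\sfH(k+\epsilon)}=\omega_0$, we get $\omega_{\sfH(k)}(\alpha_k(a))=\omega_{\sfH(k+\epsilon)}(a)=\omega_0(a)$. This closes the second bracket.

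For the first bracket I would use \cref{prp:relative.evolution.is.evolution} to write $\alpha_k^{-1}\circ\gamma_k=\alpha\bigl(\alpha_k^{-1}(\Theta_{Y_k}(\sfG_{\sfH})-\sfG_{\sfH})\,;k,k+\epsilon\bigr)$, an LG automorphism whose generator is (up to a fast-decaying redistribution tail near $\partial Y_k$) supported in $Y_k^c$; write $\omega_0(\gamma_k(a))=(\omega_0\circ\alpha_k)\bigl((\alpha_k^{-1}\gamma_k)(a)\bigr)$. Since $(\alpha_k^{-1}\gamma_k)(a)$ is again quasi-localized in $Q_{2\epsilon}(\bm k)$ (\cref{lem:mesh.approximation}), the computation of the previous paragraph shows $(\omega_0\circ\alpha_k)$ may be replaced by $\omega_{\sfH(k+\epsilon)}$ up to $f_{\nu,\mu}(\epsilon s)$-errors, and $\omega_{\sfH(k+\epsilon)}\circ\alpha_k^{-1}=\omega_{\sfH(k)}$, reducing the first bracket to the estimate
\[
\bigl|\,\omega_{\sfH(k)}(\gamma_k(a))-\omega_{\sfH(k+\epsilon)}(a)\,\bigr|\ \le\ J_{\sfG_{\sfH},\nu,\mu}\, f_{\nu,\mu}(\epsilon s)\,\|a\|.
\]
This is the local avatar of the interpolation statement (\cref{thm:interpolation.loop}, \cref{prp:cut.trivial.Hamiltonian}): truncating the adiabatic connection of $\sfH$ to the half-space $Y_k$ produces a flow carrying the ground state $\omega_{\sfH(k)}$ to a state agreeing with $\omega_{\sfH(k+\epsilon)}$ on $\cA_{Y_k}$ and with $\omega_{\sfH(k)}$ on $\cA_{Y_k^c}$, up to $f_{\nu,\mu}$-tails near $\partial Y_k$; it follows from the factorizations $\omega_{\sfH(v)}=\omega_0|_{\cA_{Q_{2\epsilon}(\bm k)}}\otimes\omega_{\sfH(v)}|_{\cA_{Q_{2\epsilon}(\bm k)^c}}$ (pure restriction implies product, as in the proof of \cref{prp:cut.trivial.Hamiltonian}), the decomposition $\alpha_k=\alpha(\Pi_{Y_k^c}(\sfG_{\sfH})\,;k,k+\epsilon)\circ\alpha(\Pi_{Y_k}^{\star}(\sfG_{\sfH})\,;k,k+\epsilon)$ of \cref{lem:LGA.cone.decomposition}, and \cref{cor:Lieb.Robinson.approx.Ck}.

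The main obstacle is precisely this last estimate for the two cases $k=k_l$ and $k=k_l+\epsilon$, where the truncation hyperplane $\partial Y_k=\{x_l=k\}$ cuts through $Q_\epsilon(\bm k)=\supp a$, so that $\Theta_{Y_k}(\sfG_{\sfH})-\sfG_{\sfH}$ is \emph{not} far from $\supp a$ and a naive norm bound on $\gamma_k(a)-\alpha_k(a)$ fails. The point to exploit is that $\omega_0$ is a product state across $\partial Y_k$ while $\Theta_{Y_k}$ keeps every local generator inside $\cA_{Y_k}$, so in $\omega_0([\,\Theta_{Y_k}(\sfG_{\sfH})-\sfG_{\sfH}\,,\,\cdot\,])$ only the $\cA_{Y_k}$-component of the argument near $\partial Y_k$ is felt, and that component is governed by the ground-state-transporting property of $\alpha(\Pi_{Y_k}^{\star}(\sfG_{\sfH})\,;\cdot)$; chaining these quasi-locality bounds carefully, uniformly in $s$, is the technical heart of the argument, after which the three values of $k$ are treated on the same footing.
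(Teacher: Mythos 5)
Your decomposition is genuinely different from the paper's, and while the ``second bracket'' is handled correctly, the ``first bracket'' contains a gap that your own hedging signals. The paper does \emph{not} compare $\gamma_k$ with the full step $\alpha_k$; it compares $\gamma_k=\alpha(\Theta_{Y_k}(\sfG_{\sfH})\,;k,k+\epsilon)$ with the \emph{doubly}-truncated flow $\alpha(\Theta_{Y_k\cap Q_{2\epsilon}(\bm{k})}(\sfG_{\sfH})\,;k,k+\epsilon)$. The whole point of intersecting with $Q_{2\epsilon}(\bm{k})$ is that the truncated generator then lies in $\cA_{Q_{2\epsilon}(\bm{k})}$, so the orbit of $a\in\cA_{Q_\epsilon(\bm{k})}$ never leaves $\cA_{Q_{2\epsilon}(\bm{k})}$. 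There the hypothesis $\omega_{\sfH(v)}|_{\cA_{Q_{2\epsilon}(\bm{k})}}=\omega_0$ applies \emph{exactly}, and a one-line differentiation in the flow time — using that $\omega_0$ is a product state so that $\omega_0\bigl([\Theta_{Z}(\sfG),b]\bigr)=\omega_0\bigl([\sfG,(\id_{\cA_Z}\otimes\omega_0)(b)]\bigr)$, and that $\sfG_{\sfH}(v)\omega_0$ vanishes on $\cA_{Q_{2\epsilon}(\bm{k})}$ by constancy and \cref{thm:automorphic.equivalence} — shows that the doubly-truncated flow fixes $\omega_0$ \emph{exactly}. The only error term left is $\|\gamma_k(a)-\alpha(\Theta_{Y_k\cap Q_{2\epsilon}(\bm{k})}(\sfG_{\sfH})\,;k,k+\epsilon)(a)\|$, a clean Lieb--Robinson estimate as in \cref{lem:mesh.approximation}, because the two generators differ only near $\partial Q_{2\epsilon}(\bm{k})$, which is $\sim\epsilon$ away from $\supp a$.

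Your route does not confine the orbit, and that is where the gap opens. You correctly observe that for $k=k_l$ and $k=k_l+\epsilon$ the cut $\partial Y_k$ runs through $\supp a$, so a norm bound on $\gamma_k(a)-\alpha_k(a)$ fails; but your fix only trades one quantity for another of the same type. Writing $\omega_0(\gamma_k(a))=(\omega_0\circ\alpha_k)\bigl((\alpha_k^{-1}\gamma_k)(a)\bigr)$ and replacing $\omega_0\circ\alpha_k$ by $\omega_{\sfH(k+\epsilon)}$ leaves you with $\bigl|\omega_{\sfH(k+\epsilon)}\bigl((\alpha_k^{-1}\gamma_k)(a)-a\bigr)\bigr|$, where the operator is $O(\|a\|)$ in norm and $\omega_{\sfH(k+\epsilon)}$ agrees with $\omega_0$ only on $\cA_{Q_{2\epsilon}(\bm{k})}$, so you are back to exactly the kind of ground-state-against-a-truncation estimate you started from. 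Applying the product-state trick with $Z=Y_k$ instead of $Z=Y_k\cap Q_{2\epsilon}(\bm{k})$ does not help, because $(\id_{\cA_{Y_k}}\otimes\omega_0)(\cdot)$ only confines to $\cA_{Y_k}$, not to the region where the constancy is known, and the commutator with the full $\sfG_{\sfH}$ against the uncontrolled tail is not an $O(f_{\nu,\mu}(\epsilon s)\|a\|)$ quantity. The missing idea is precisely the additional truncation to $Q_{2\epsilon}(\bm{k})$, and I do not see how your decomposition closes without reintroducing it, at which point the detour through $\alpha_k$ and \cref{prp:relative.evolution.is.evolution} becomes unnecessary.
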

\begin{proof}
    We begin by observing that, for any $k=k_l-\epsilon, k_l,k_l+\epsilon$, $t \in [0,\varepsilon]$, and $a \in \cA_{Q_{\epsilon}(\bm{k})}$, we have
    \[
    \omega_{0} \circ \alpha \big( \Theta_{Q_{2\epsilon}(\bm{k}) \cap Y_k , \omega }(\sfG_{\sfH}) \,; k,k+t \big)  (a) = \omega_0(a). 
    \]
    Indeed, this is verified as
    \begin{align*}
        {}&{} \frac{d}{dt}\omega_{0} \circ \alpha \big( \Theta_{Q_{2\epsilon}(\bm{k}) \cap Y_k , \omega}(\sfG_{\sfH}) \,; k,k+t \big)  (a) \\
        ={}&{} \omega_0 \big( \big[ \Theta_{Q_{2\epsilon}(\bm{k}) \cap Y_k , \omega} (\sfG_{\sfH}(k+t)) ,  \alpha \big( \Theta_{Q_{2\epsilon}(\bm{k}) \cap Y_k , \omega}(\sfG_{\sfH}) \,; k,k+t \big)  (a)  \big] \big)\\
        ={}&{} \omega_0 \big( \big[ \Theta_{Q_{2\epsilon}(\bm{k}) \cap Y_k, \omega} (\sfG_{\sfH}(k+t)) , (\id_{\cA_{Q_{2\epsilon}(\bm{k}) \cap Y_k }} \otimes \omega_0|_{_{\cA_{Q_{2\epsilon}(\bm{k}) \cap Y_k^c }}}) \circ  \alpha \big( \Theta_{Q_{2\epsilon}(\bm{k}) \cap Y_k, \omega}(\sfG_{\sfH}) \,; k,k+t \big)  (a)  \big] \big)\\
        ={}&{} \omega_{k+t} \big( \big[ \sfG_{\sfH}(k+t) , (\id_{\cA_{Q_{2\epsilon}(\bm{k}) \cap Y_k }} \otimes \omega_0|_{_{\cA_{Q_{2\epsilon}(\bm{k}) \cap Y_k^c }}}) \circ  \alpha \big( \Theta_{Q_{2\epsilon}(\bm{k}) \cap Y_k, \omega}(\sfG_{\sfH}) \,; k,k+t \big)  (a))  \big] \big) \\
        ={}&{} -\bigg(\frac{d}{dt}\bigg|_{t=0}\omega_{k+t}\bigg) \big( (\id_{\cA_{Q_{2\epsilon}(\bm{k}) \cap Y_k }} \otimes \omega_0|_{_{\cA_{Q_{2\epsilon}(\bm{k}) \cap Y_k^c }}}) \circ  \alpha \big( \Theta_{Q_{2\epsilon}(\bm{k}) \cap Y_k, \omega}(\sfG_{\sfH}) \,; k,k+t \big)  (a))  \big) =0 .
    \end{align*}
    In the last two equalities, we used \cref{thm:automorphic.equivalence} and the constancy of $\omega_{k+t}|_{\cA_{Q_{2\epsilon}(\bm{k})}}$.   
    
    This observation shows that the latter term in the right hand side of
    \begin{align*}
         {}&{} |\omega_0 (\gamma_{k}(a)) - \omega_0( a) | \\
        = {}&{}  \big| \omega_0 \big( \alpha \big( \Theta_{Y_{k},\omega}(\sfG_{\sfH}) \,; k,k+\epsilon)(a) -a \big) \big| \\
        \leq {}&{} \big| \omega_0 \big( \alpha \big( \Theta_{Y_{k},\omega}(\sfG_{\sfH}) \,; k,k+\epsilon)(a) -\alpha \big( \Theta_{Y_{k} \cap Q_{2\epsilon} (\bm{k}),\omega}(\sfG_{\sfH}) \,; k,k+\epsilon)(a) \big) \big| \\
        {}&{} \quad +  \big| \omega_0 \big( \alpha \big( \Theta_{Y_{k} \cap Q_{2\epsilon}(\bm{k}),\omega}(\sfG_{\sfH}) \,; k,k+\epsilon)(a) -a \big)\big|
    \end{align*}
    vanishes. By \cref{lem:mesh.approximation}, the first term is bounded above as desired. 
\end{proof}

\begin{lem}\label{lem:gradation.state.estimate}
    Let $\sfH \in \sIP_{\loc} (\bR^l \midbar [0,1])$ satisfies that $\omega_{v} |_{\cA_{Q_{2\epsilon}(\bm{k})}} = \omega_0$ for any $v \in [k_l-\epsilon, k_l+2\epsilon]$. Then, for any $a \in \cA_{Q(\bm{k})}$, we have 
\[
    |\omega_0 (\gamma_{[0,1]} (a)) - \omega_0(a)) | \leq  (3\epsilon^{-1} + 5 ) \cdot J_{\sfG_{\sfH},\nu,\mu} \cdot f_{\nu,\mu}(s \epsilon ) \cdot \| a\| .
\]
\end{lem}
\begin{proof}
    First, since $Y_k \supset Q_{2\epsilon}(\bm{k})$ for any $k =0,\cdots, k_l-2\epsilon$, we have
    \begin{align*}
        {}&{} \| \gamma_{[0,k_l-\epsilon]} (a) - \alpha_{[0,k_l-\epsilon ]}(a) \| \\
        \leq {}&{} \sum_{k\in [0 , k_l - 2\epsilon] \cap \epsilon\bZ }\| (\gamma_{k} - \alpha_{k}) \circ \alpha_{[0,k-1]}(a) \| \\
        \leq {}&{}\sum_{k\in [0 , k_l - 2\epsilon] \cap \epsilon\bZ } \Big(   \| (\gamma_{k} - \alpha_{k}) \circ \Pi_{Q_{\epsilon}(\bm{k})}(\alpha_{[0,k-1]}(a)) \| + 2 \cdot \| \alpha_{[0,k-1]}(a) - \Pi_{Q_{\epsilon}(\bm{k})}(\alpha_{[0,k-1]}(a)) \| \Big) \\
        \leq {}&{}\sum_{k\in [0 , k_l - 2\epsilon] \cap \epsilon\bZ } \Big(  J_{\sfG_{\sfH},\nu,\mu} \cdot f_{\nu,\mu}(\epsilon s) \cdot \| \Pi_{Q_{\epsilon}(\bm{k})}(\alpha_{[0,k-1]}(a)) \| + 2 \cdot J_{\sfG,\nu,\mu} \cdot f_{\nu,\mu}(\epsilon s)  \cdot  \|a \| \Big)\\
        \leq {}&{}3(k_l/\epsilon-2) \cdot J_{\sfG_{\sfH},\nu,\mu}\cdot f_{\nu,\mu}(\epsilon s)  \cdot  \|a \| ,
    \end{align*}
    and hence 
    \begin{align}
    \begin{split}
        |\omega_0 \circ \gamma_{[0,k_l-\epsilon]} (a) - \omega_0 (a)| \leq {}&{} |\omega_0 \circ (\gamma_{[0,k_l-\epsilon]}  - \alpha_{[0,k_l-\epsilon ]})(a)|\\
        \leq {}&{} 3(k_l/\epsilon-2) \cdot J_{\sfG_{\sfH},\nu,\mu}\cdot f_{\nu,\mu}(\epsilon s)  \cdot  \|a \|.
    \end{split}\label{eqn:gradation.estimate.1}
    \end{align}
    Second, 
    \begin{align}
    \begin{split}
        {}&{} |\omega_0 \circ \gamma_{[0,k_l+2\epsilon]} (a) -\omega_0 \circ \gamma_{[0,k_l-\epsilon]} (a)| \\
        \leq {}&{} \sum_{k=k_l-\epsilon,k_l, k_l+\epsilon} 2\| \gamma_{[0,k]} (a) - \Pi_{Q_{\epsilon}(\bm{k})} (\gamma_{[0,k]} (a) ) \| + |(\omega_0 \circ \gamma_{k} - \omega_0) (\Pi_{Q_{\epsilon}(\bm{k})} (\gamma_{[0,k]} (a) )| \\
        \leq {}&{} 9 \cdot J_{\sfG,\nu,\mu} \cdot f_{\nu,\mu}(\epsilon s) \cdot \| a \|.
    \end{split}\label{eqn:gradation.estimate.2}
    \end{align}
    Finally, 
    \begin{align*}
         {}&{}\| \gamma_{[0,1]} (a) - \gamma_{[0,k_l+2\epsilon]}(a) \| \\
        \leq {}&{} \big\| (\gamma_{[k_l+2\epsilon,1]} - \id)\big( \gamma_{[0,k_l+2\epsilon]}(a) - \Pi_{Q_{\epsilon}(\bm{k})}(\gamma_{[0,k_l+2\epsilon]}(a)) \big) \big\|  \leq 2J_{\sfG,\nu,\mu}  \cdot f(\epsilon s) \cdot \| a\|  ,
    \end{align*}
    and hence 
    \begin{align}
        |\omega_0 \circ \gamma_{[0,1]} (a) - \omega_0 \circ \gamma_{[0,k_l+2\epsilon]}(a)|\leq 2J_{\sfG,\nu,\mu}  \cdot f(\epsilon s) \cdot \| a\| . \label{eqn:gradation.estimate.3}
    \end{align}
    In summary, by \eqref{eqn:gradation.estimate.1}, \eqref{eqn:gradation.estimate.2}, and \eqref{eqn:gradation.estimate.3}, we obtain the desired estimate.
\end{proof}
Let $\bm{x}_0 \coloneqq (1/2,\cdots,1/2) \in \bR^l$. For $I \in 2^l$, we write $\bm{x}_I$ for the vertex of $[0,1]^l$ whose $i$-th coordinate is $1$ if and only if $i \in I$. We define the half spaces $Y_{L,i}$ and the conical regions $Z_{R,i}$ as
\[
    Y_{L,i} \coloneqq \{ \bm{x} \in \bR^d \mid x_i \leq 0\}, \quad Z_{R,i} \coloneqq \Big\{ \bm{x}_0 + \sum_{I_i =1} c_{I} (\bm{x}_I - \bm{x}_0) \mid c_I \geq 0 \Big\} \setminus [0,1]^l \subset \bR^l. 
\]
Then $Y_{L,i}$, $Z_{R,i}$ and $[0,1]^l$ intersects at their boundaries and their union covers $\bR^l$.
For $\bm{k}_1 \in (\epsilon \bZ \cap [0,1])^{l-1}$, set
\[
    W_{\bm{k}_1,r} \coloneqq \bigcup_{\substack{\pr_{\bR^{l-1}} (Q(\bm{k})) \subset Q_{r}(\bm{k}_1)  \\ Q_{r}(\bm{k}) \cap U_X = \emptyset}}Q(\bm{k}).
\]
Let $ \{ \rho_{\bm{k}} \}_{\bm{k}}$ be a smooth partition of unity on $[0,1]^{l-1}$, indexed by $\bm{k}_1 \in (\epsilon \bZ \cap [0,1])^l$, that each $\rho_{\bm{k}_1}$ is supported on the $(l-1)$-dimensional small cube $Q_{\epsilon}(\bm{k}_1)$. 

Let $\sfH \in \sIP_{\loc }(\bR^l \midbar \sM \times  \sD_X )$, that is, $\sfH$ is a smooth family of IG localization flows parametrized by $\sM \times [0,1]^l$ whose restriction to the complement $\sM \times \overline{U}{}_X^c$ is trivial.   
For $p_1 = (x_1,\cdots,x_{l-1}) \in \bR^{l-1}$, we write $\gamma_{[0,1]}^{p_1}$ for the LG automorphism in \cref{lem:gradation.state.estimate} associated to the smooth path $\sfH(p_1,\blank)$. 
By \cref{lem:gradation.state.estimate}, there is $s_0 \in \bR_{\geq 1}$ such that 
\begin{align*}
    \Gamma_X^{(1)} (\sfH) (p_1) \coloneqq \sum_{\bm{k}_1 \in (\epsilon \bZ \cap [0,1])^{l-1}} \rho_{\bm{k}_1} (p_1) \cdot  \Theta_{(Z_{R,l} \cup W_{\bm{k}_1,2\epsilon })^c} \big( \gamma_{[0,1]}^{p_1}(\sfh) \big) 
\end{align*}
is gapped for $s \geq s_0$. After the parameter shift by $s_0$ (\cref{rmk:parameter.shift}), this gives a smooth family of IG localization flows parametrized by $[0,1]^{l-1}$. Its value at $p_1$ trivial on $Y_{L,l}$, on $Z_{R,l}$, and on $Q(\bm{k})$ if $p_1 \in \pr_{\bR^{l-1}}(Q(\bm{k}))$ and $Q_{2\epsilon}(\bm{k}) \subset \overline{U}{}_{X}^c$. 

We can iterate this construction $l$ times. 
The $m$-th step $\Gamma_X^{(m)}(\sfH)$ is a $(l-m)$-parameter family of IG localization flows parametrized by $[0,1]^{l-m}$. Its value at $p_m$ is trivial on $Y_{L,i}$ and $Z_{R,i}$ for $l-m+1 \leq i \leq l$, and on $Q(\bm{k})$ if $p_m \in \pr_{\bR^{l-m}}(Q(\bm{k}))$ and $Q_{2m\epsilon }(\bm{k}) \subset \overline{U}{}^c_X$. In particular, $\Gamma_{X}^{(l)}(\sfH)$ is supported on
\[
    \bigg(\bigcup_{i=1}^l (Y_{L,i} \cup Z_{R,i}) \cup \bigcup_{Q_{2l\epsilon}(\bm{k}) \cap U_X = \emptyset } Q(\bm{k}) \bigg)^c = U_{X,2l\epsilon} \simeq X. 
\]
As a result, we obtain a morphism
\begin{align*}
    \Gamma_X^{(l)} \colon \sIP_{\loc}[\bR^l \midbar \sM \times \sD_X] \to \sIP_{\loc}[U_{X,2l\epsilon} \midbar \sM] \cong \sIP_{\loc}[X \midbar \sM], 
\end{align*}
that is natural with respect to the entry $\sM$. 
Moreover, if $A$ is a subcomplex of $X$ and $\sfH$ is supported on a deformation retract neighborhood $U_A$ of $A$, i.e., $\sfH (p) = \sfh$ for $p \in U_X \setminus U_A$, then $\Gamma_X (\sfH)$ is contained in $\sIP_{\loc}[A \midbar \sM]$. This means that the diagram
    \begin{align}
    \begin{split}
    \xymatrix{
        \sIP_{\loc}[\bR^l \midbar  \sM \times \sD_A] \ar[r]^{\hspace{3ex}\Gamma_A^{(l)}}  \ar[d]^{j^* } & \sIP_{\loc}[A \midbar \sM] \ar[d]^{\iota_*} \\
        \sIP_{\loc}[\bR^l \midbar \sM \times \sD_X] \ar[r]^{\hspace{3ex}\Gamma_X^{(l)}} &  \sIP_{\loc}[X \midbar \sM]
        }
    \end{split}\label{eqn:diagram.commute.duality}
    \end{align}
    commutes, where $j \colon \sD_X \to \sD_A$ denotes the quotient.

\begin{lem}\label{lem:duality.trivial}
    If $X$ is contractible, then $\Gamma_X^{(l)}$ is an isomorphism. 
\end{lem}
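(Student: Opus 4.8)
The strategy is to reduce to the case $X=\{\bm{x}_0\}$ of a single point and then to identify $\Gamma_X^{(l)}$ (more precisely, the morphism of sheaves it defines, so that ``isomorphism'' means weak equivalence, equivalently a bijection for every $\sM$) with a homotopy inverse of the $l$-fold iterated localized Kitaev pump.

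\emph{Reduction to a point.} Since $X$ is a contractible finite CW-complex, pick a $0$-cell $\bm{x}_0\in X$; by \cref{rmk:embed.CWfin.EucCW} the inclusion $\{\bm{x}_0\}\hookrightarrow X$ has a Lipschitz continuous homotopy inverse. The normal bundle of a contractible subcomplex of $\bR^l$ is trivial, so $\sD_X\simeq\boldsymbol{\Sigma}^lX_+\simeq S^l$ and the collapse map $j\colon\sD_X\to\sD_{\{\bm{x}_0\}}$ is a pointed homotopy equivalence (being Spanier--Whitehead dual to the stable equivalence $\Sigma^\infty\{\bm{x}_0\}_+\to\Sigma^\infty X_+$). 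Hence in the commutative square \eqref{eqn:diagram.commute.duality} with $A=\{\bm{x}_0\}$, the map $\iota_*$ is a weak equivalence by \cref{cor:localizing.coarse.homotopy} and $j^*$ is a weak equivalence because $\sHom(-,\sF)$ carries pointed homotopy equivalences to weak equivalences. So $\Gamma_X^{(l)}$ is a weak equivalence if and only if $\Gamma_{\{\bm{x}_0\}}^{(l)}$ is, and we may assume $X=\{\bm{x}_0\}$.

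\emph{The two sides, and a candidate inverse.} For $X=\{\bm{x}_0\}$ the dual $\sD_{\{\bm{x}_0\}}$ is a small cube around $\bm{x}_0$ modulo its boundary, which we identify with $S^l$; thus there are natural identifications $\sIP_{\loc}[\bR^l\mid\sM\times\sD_{\{\bm{x}_0\}}]\cong\big(\Omega^l\bsIP_{\loc}(\bR^l)\big)[\sM]$, the $l$ loop coordinates being the coordinate directions of the parameter cube, and $\sIP_{\loc}[\{\bm{x}_0\}\mid\sM]=\bsIP_{\loc}(\pt)[\sM]$ (up to the forgetful equivalence of \cref{lem:bold.invertible.sheaf}). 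Since $\boldsymbol{\Sigma}^k\pt=\bR^k$, iterating the localized Kitaev pump yields a composite of weak equivalences
\[
    \Psi\colon\bsIP_{\loc}(\pt)\xrightarrow{\ \boldsymbol{\kappa}^{\loc}\ }\Omega\bsIP_{\loc}(\boldsymbol{\Sigma}\pt)\xrightarrow{\ \Omega\boldsymbol{\kappa}^{\loc}\ }\cdots\xrightarrow{\ \Omega^{l-1}\boldsymbol{\kappa}^{\loc}\ }\Omega^l\bsIP_{\loc}(\boldsymbol{\Sigma}^l\pt)=\Omega^l\bsIP_{\loc}(\bR^l),
\]
which is a weak equivalence by applying \cref{cor:spectrum.localizing} $l$ times.

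\emph{The key homotopy.} It now suffices to prove $\Gamma_{\{\bm{x}_0\}}^{(l)}\circ\Psi\simeq\id$: a one-sided homotopy inverse of a weak equivalence is two-sided, so this forces $\Gamma_{\{\bm{x}_0\}}^{(l)}$ to be a weak equivalence. This is the \emph{direct} direction, parallel to $\vartheta_d\circ\kappa_d\simeq\id$: a single fold $\Gamma^{(1)}$ is built from conical truncations $\Theta$ and adiabatic parallel transport along the gradation $\gamma_{[0,1]}$, i.e.\ it is the adiabatic interpolation $\vartheta$ carried out in one coordinate direction, and one application of $\boldsymbol{\kappa}^{\loc}$ introduces exactly that direction. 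I would show $\Gamma^{(1)}\circ\boldsymbol{\kappa}^{\loc}\simeq\id$ one direction at a time, by unwinding the (truncated) localized Kitaev pump against the meshing cubes $Q(\bm{k})$ and the adiabatic transports, using \cref{thm:interpolation.loop}, \cref{lem:cut.diffused} and the explicit homotopies in the proofs of \cref{lem:switch.localization.flow,cor:spectrum.localizing}, exactly as in the proof that $\vartheta_d$ is a homotopy inverse to $\kappa_d$ (\cref{cor:spectrum}); composing (telescoping) the $l$ single-direction homotopies gives $\Gamma_{\{\bm{x}_0\}}^{(l)}\circ\Psi\simeq\id$.

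\emph{Main obstacle.} The technical heart is the last paragraph: matching the $l$-step meshed folding $\Gamma_{\{\bm{x}_0\}}^{(l)}$ against the iterated pump $\Psi$ and producing the homotopy while keeping uniform control of the localization $C^k$-norms \eqref{eqn:localizing.norm} and of the spectral gap, and tracking the nested conical regions $Y_{L,i},Z_{R,i}$ and gradations $\gamma_{[0,1]}^{p}$ through all $l$ stages. Conceptually each individual stage is only a coarse-geometric repackaging of the already-proved homotopy $\vartheta_d\circ\kappa_d\simeq\id$, so the difficulty is bookkeeping rather than a new idea.
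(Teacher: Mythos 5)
Your proof takes essentially the same approach as the paper: reduce to $X=\pt$ via the commutative square \eqref{eqn:diagram.commute.duality} and homotopy invariance, then recognize $\Gamma_{\pt}^{(l)}$ as the $l$-fold iterate of the adiabatic interpolation $\vartheta_{\loc}$. The paper is slightly more economical --- it simply identifies $\Gamma_{\pt}^{(l)}$ with $\vartheta_{\loc}^{l}$ and cites the already-proved fact that $\vartheta_{\loc}$ induces isomorphisms, instead of separately establishing the one-sided inverse homotopy $\Gamma^{(l)}\circ\Psi\simeq\id$; the ``bookkeeping'' you defer is precisely that identification, so the content is the same.
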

\begin{proof}
    Consider the diagram \eqref{eqn:diagram.commute.duality} for the inclusion $\pt \to X$. In this case, the vertical maps are isomorphisms by the homotopy invariance of both groups. 
    Moreover, 
\begin{align*}
    \Gamma_{\pt}^{(l)} \colon \sIP_{\loc}[\bR^l \midbar  \sM \times \sD_{\pt}] \to \sIP_{\loc}[\pt \midbar \sM]
\end{align*}
is nothing else than the $l$ times compositions of morphisms $\vartheta_{\loc}$, which are already proved to induce isomorphisms. 
\end{proof}

\begin{proof}[Proof of \cref{thm:Atiyah.duality}]
    By \eqref{eqn:diagram.commute.duality} and \cref{lem:duality.trivial}, we can apply the Mayer--Vietoris argument in \cref{cor:mayer-vietoris} for the morphism $\Gamma_X^{(l)}$ with respect to the cell attachment in the homological variable $X$. In conclusion, we obtain that 
\begin{align*}
    \Gamma_{X}^{(l)} \colon \sIP_{\loc}[\bR^l \midbar  \sM  \times  \sD_{X}] \to \sIP_{\loc}[X \midbar \sM]
\end{align*}    
is an isomorphism for any $\sM \in \Man$. 

Finally, we give rise to weak equivalence of sheaves. Let $\sHom(\sD_X, \sIP_{\loc,l})_{\spadesuit}$ denote the subsheaf of $\sHom(\sD_X, \sIP_{\loc,l})$ consisting of IG localization flows $\sfH$ such that the almost local norms $\vvert \sfH\vvert_{\loc,f}$ are small enough that $\Gamma_X^{(l)}(\sfH)$ is gapped for any $s \geq 1$ without any parameter shift. Consider the morphisms
\begin{align*}
    \sHom(\sD_X, \sIP_{\loc,l}) \hookleftarrow \sHom(\sD_X, \sIP_{\loc,l})_{\spadesuit} \xrightarrow{\Gamma_X^{(l)}} \sIP_{\loc}(X).
\end{align*}
The parameter shift homotopy in \eqref{rmk:parameter.shift} shows that the left morphism is a weak equivalence. 
Moreover, what is proved in the above paragraph is nothing but the weak equivalence of the right morphism. This finishes the proof.  
\end{proof}

\section{Invertible gapped phases with spatial symmetry}\label{section:assembly}
In this section, we study the topology of the invertible phases protected by the spatial symmetry given by a crystallographic group $\Gamma$. 
This class includes both translation symmetries and point group symmetries. 
We formulate the equivariant sheaves, spectra, and cohomology functors representing the spatial symmetries acting on the spaces of IG UAL Hamiltonians and IG localization flows. 
The former is a more natural object to consider, while the latter is easier to handle and aligns better with previous research. 
Their natural comparison is called the equivariant coarse assembly map. We show that it is split injective. 
We also discuss the provenance of this map. 
In fact, it coincides with the Davis--L\"{u}ck assembly map in equivariant algebraic topology, and hence is closely related to the Baum--Connes and the Farrell--Jones maps.

\subsection{Equivariant \texorpdfstring{$\Omega$}{Omega}-spectra of invertible phases with spatial symmetry}
Let $\Gamma$ be a crystallographic group, i.e., a countable group acting faithfully, properly and cocompactly on the Euclidean space $\bE_{\Gamma} \coloneqq \bR^m$. In other words, $\Gamma$ is a discrete cocompact subgroup of $\bR^m \rtimes O(m)$. 
Then, by the Bieberbach theorem (see e.g.\ \cite{charlapBieberbachGroupsFlat1986}*{Theorem 3.1}), the group $\Gamma$ has the finite index normal subgroup $N \coloneqq \Gamma \cap \bR^m$ that is isomorphic to $\mathbb{Z}^m$ acting on $\bE_{\Gamma}$ by translation. 

\begin{rmk}\label{rmk:classifying.space}
    We write $\cF_{\fin} $ for the class of finite subgroups of $\Gamma$. A $\Gamma$-$\cF_{\fin}$-CW-complex is the same thing as a proper $\Gamma$-CW-complex (\cite{davisSpacesCategoryAssembly1998}*{Section 6}). The $\Gamma$-space $E(\Gamma, \cF_{\fin})$ is the universal example of $\Gamma$-$\cF_{\fin}$-CW-complexes (cf.\ \cite{baumClassifyingSpaceProper1994}*{Section 2} and \cite{luckSurveyClassifyingSpaces2005}), i.e., any $\Gamma$-$\cF_{\fin}$-CW-complex $X$ has a $\Gamma$-equivariant map $X \to E(\Gamma , \cF_{\fin})$ that is unique up to homotopy. This $E(\Gamma,\cF_{\fin})$ is characterized by the property that the subspaces $E(\Gamma,\cF_{\fin})^K$ are all contractible for any $K \in \cF_{\fin}$ (\cite{luckSurveyClassifyingSpaces2005}*{Theorem 1.9}). 
    Therefore, when $\Gamma$ is a crystallographic group, the Euclidean space $\bE_{\Gamma}$ models $E(\Gamma,\cF_{\fin})$.
\end{rmk}

\begin{defn}\label{defn:MCW.equivariant}
    We write $\ECW_{\Gamma,\mathrm{p}}$ for the category whose object is a proper $\Gamma$-CW-complex $X$ equipped with a $\Gamma$-invariant proper metric $\rmd_X$ such that
    \begin{enumerate}
        \item $(X,\rmd_X) \in \ECW$ in the sense of \cref{defn:MCW}, and
        \item there is $w_X \geq 1 $ such that $w_X^{-1} \rmd_{\Gamma}(g,h) \leq \rmd(g\bm{x},h\bm{x}) \leq w_X \rmd_{\Gamma}(g,h)$ for any $g,h \in \Gamma$ and $\bm{x} \in X$. 
    \end{enumerate}
    A morphism between $X,Y \in \ECW_{\Gamma , \mathrm{p}}$ is a linearly proper Lipschitz continuous map that is $\Gamma$-equivariant. We write $\ECW_{\Gamma, \mathrm{fp}} $ for the full subcategory consisting of finite proper $\Gamma$-CW-complexes.
\end{defn}

\begin{rmk}
    In the above definition of $\ECW_{\Gamma ,\mathrm{p}}$, we do not assume for the embedding $X \hookrightarrow \bR^{l_X}$ to be equivariant. However, this relaxation does not allow the class of discrete groups that can be treated as the spatial symmetry $\Gamma$ to extend beyond crystallographic groups. 
    Indeed, \cref{defn:MCW.equivariant} requires the group $\Gamma$ to be quasi-isometrically embeddable into a Euclidean space. Such a group, which must have polynomial growth, is virtually nilpotent and has Shalom's property $H_{FD}$ (Gromov's polynomial growth theorem, \cites{gromovGroupsPolynomialGrowth1981,shalomHarmonicAnalysisCohomology2004,ozawaFunctionalAnalysisProof2018}). 
    By \cite{decornulierIsometricGroupActions2007}*{Theorem 1.4}, $\Gamma$ acts properly and cocompactly on a Euclidean space. That is, $\Gamma$ is an extension of a finite group by a crystallographic group.  
\end{rmk}

\begin{rmk}\label{rmk:equivariant.universal}
    We remark a refined version of the universality of the proper $\Gamma$-space $\bE_{\Gamma} =E(\Gamma,\cF_{\fin})$ in \cref{rmk:classifying.space}. For any $X \in \ECW_{\Gamma, \mathrm{p}}$, there is a Lipschitz continuous $\Gamma$-map $F \colon X \to \bE_{\Gamma}$. 
    Such $F$ is constructed explicitly in the following way. 
    Let $\Lambda $ be a $\Gamma$-invariant $(R,Q)$-quasilattice of $X$ and let $F' \colon \Lambda \to \bE_{\Gamma}$ be a $\Gamma$-equivariant map. By the assumptions $w_X>0$, $F'$ is automatically large-scale Lipschitz. 
    Let $\{ \rho_{\bm{x}} \}_{\bm{x} \in \Lambda}$ be the associated $\Gamma$-equivariant partition of unity subordinate to $\{ B_{R}(\bm{x}) \}_{\bm{x} \in \Lambda}$. 
    Since $X$ is bi-Lipschitz embeddable into an Euclidean space and $\{ \rho_{\bm{x}} \}_{\bm{x} \in \Lambda}$ is uniformly locally finite (any $\bm{x} \in X$ is contained in at most $N$ open balls), we may choose $\rho_{\bm{x}}$'s to be $A$-Lipschitz by the same $A>0$. 
    Then, $F(\bm{x}) \coloneqq \sum_{\bm{y} \in \Lambda} \rho_{\bm{y}}(\bm{x}) \cdot F'(\bm{y})$ gives a Lipschitz continuous map to $\bE_{\Gamma}$.
\end{rmk}

\begin{lem}\label{rmk:embedding.equivariant}
    Let $X \in \ECW_{\Gamma, \mathrm{p}}$. Then, there is a representation $\bV$ of $\Gamma/N$ and a Lipschitz continuous $\Gamma$-map $\iota_X \colon X\to \bV^{n_X}$ such that, for any $Y \in \ECW_{\Gamma,\mathrm{p}}$ and a $\Gamma$-equivariant linearly proper Lipschitz continuous map $F \colon X \to Y$, the product map $\widetilde{F}\coloneqq F \times \iota_X \colon X \to Y \times \bV^{n_X}$ is an injective linearly proper bi-Lipschitz embedding. In particular, $X$ is $\Gamma$-equivariantly embeddable into $\bE_{\Gamma}\times\bV^{n_X}$.  
\end{lem}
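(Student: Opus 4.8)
The plan is to take $\bV$ to be the real regular representation $\bR[Q]$ of the finite point group $Q\coloneqq\Gamma/N$ and to construct $\iota_X$ by factoring through the quotient $X/N$. Since the $\Gamma$-action is cocompact and $N$ has finite index in $\Gamma$, the quotient $X/N$ is a finite CW complex carrying an isometric $Q$-action for the quotient metric, so by \cref{rmk:embed.CWfin.EucCW} there is a bi-Lipschitz embedding $\jmath\colon X/N\hookrightarrow\bR^{n_X}$. Averaging over $Q$ in the usual way, I would set
\[
    \bar\iota_X\colon X/N\longrightarrow (\bR^{n_X})^{Q}\cong\bV^{\,n_X},\qquad \bar\iota_X(\bm z)\coloneqq\big(\jmath(\bar g^{-1}\bm z)\big)_{\bar g\in Q},
\]
which is $Q$-equivariant, Lipschitz (each coordinate is $\jmath$ precomposed with an isometry of $X/N$), and bi-Lipschitz from below since the coordinate at $\bar g=e$ is $\jmath$ itself, so $\|\bar\iota_X(\bm z)-\bar\iota_X(\bm w)\|\ge c_0\,\rmd_{X/N}(\bm z,\bm w)$ for some $c_0>0$. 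Composing with the quotient map $p\colon X\to X/N$ gives $\iota_X\coloneqq\bar\iota_X\circ p\colon X\to\bV^{\,n_X}$, which is Lipschitz ($p$ is $1$-Lipschitz) and, as $N$ acts trivially on $X/N$, is $\Gamma$-equivariant for the action of $\Gamma$ on $\bV^{\,n_X}$ through $\Gamma\to Q$.

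Next I would verify the properties of $\widetilde F=F\times\iota_X$ for an arbitrary $\Gamma$-equivariant linearly proper Lipschitz map $F\colon X\to Y$ with $Y\in\ECW_{\Gamma,\mathrm{p}}$, using the $\ell^2$-product metric on $Y\times\bV^{\,n_X}$. The Lipschitz upper bound is immediate. For the lower bound, fix $\bm x,\bm y\in X$ and let $n_0\in N$ attain $\rmd_{X/N}(p\bm x,p\bm y)=\rmd_X(\bm x,n_0\bm y)=:\delta$ (attained since $N$ acts properly on the proper space $X$). Then $\|\iota_X\bm x-\iota_X\bm y\|\ge c_0\delta$, while by $\Gamma$-equivariance of $F$, the triangle inequality, and condition (2) of \cref{defn:MCW.equivariant} applied at the point $F\bm y\in Y$,
\[
    \rmd_Y(F\bm x,F\bm y)\ \ge\ \rmd_Y\!\big(F\bm y,\,n_0F\bm y\big)-L_F\delta\ \ge\ w_Y^{-1}\,\rmd_\Gamma(e,n_0)-L_F\delta,
\]
with $L_F$ the Lipschitz constant of $F$. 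Combining this with $\rmd_X(\bm x,\bm y)\le\rmd_X(\bm x,n_0\bm y)+\rmd_X(n_0\bm y,\bm y)\le\delta+w_X\,\rmd_\Gamma(e,n_0)$ from condition (2) for $X$ yields $\rmd_X(\bm x,\bm y)\le C'_F\max\{\rmd_Y(F\bm x,F\bm y),\,\|\iota_X\bm x-\iota_X\bm y\|\}$ with $C'_F$ depending only on $w_X,w_Y,L_F,c_0$. This is the bi-Lipschitz lower bound; injectivity of $\widetilde F$ follows at once ($\widetilde F\bm x=\widetilde F\bm y$ forces $\rmd_X(\bm x,\bm y)=0$), and $\widetilde F$ is linearly proper because $\widetilde F^{-1}(B_r(\cdot))\subseteq F^{-1}(B_r(\cdot))$, whose diameter is controlled linearly in $r$ by hypothesis on $F$ (\cref{defn:polynomially.proper}).

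For the final clause I would specialize to $Y=\bE_\Gamma$ and take for $F$ the Lipschitz $\Gamma$-map $X\to\bE_\Gamma$ furnished by \cref{rmk:equivariant.universal}; the one extra point is that this $F$ is linearly proper, which follows from cocompactness of $\Gamma\curvearrowright X$ together with condition (2) for $\bE_\Gamma$ by a Milnor--\v{S}varc-type estimate (a radius-$r$ ball in $\bE_\Gamma$ meets only $O(1+r)$ worth of $\Gamma$-translates of a compact fundamental domain of $X$). Then $\widetilde F=F\times\iota_X\colon X\to\bE_\Gamma\times\bV^{\,n_X}$ is the desired $\Gamma$-equivariant injective linearly proper bi-Lipschitz embedding.

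The main obstacle — and the step needing the most care — is the lower Lipschitz estimate: because $\iota_X$ is $N$-invariant it cannot separate points of a common $N$-orbit, so all long-range separation must be extracted from $F$. What makes this go through is exactly the quasi-geodesic hypothesis (2) defining $\ECW_{\Gamma,\mathrm{p}}$, namely that orbit maps are uniform quasi-isometric embeddings, which forces $\rmd_Y(F\bm y,n_0F\bm y)\gtrsim\rmd_\Gamma(e,n_0)$ uniformly in the basepoint; the rest is three-term bookkeeping among $\rmd_X(\bm x,\bm y)$, $\delta$, and $\rmd_\Gamma(e,n_0)$, keeping careful track that the final bi-Lipschitz constant $C'_F$ may depend on $F$ whereas $\iota_X$ and $\bV$ must not.
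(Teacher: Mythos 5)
Your proof takes essentially the same route as the paper: you choose $\bV=\bR[\Gamma/N]$, build $\iota_X$ by averaging a non-equivariant bi-Lipschitz embedding of $X/N$ over $\Gamma/N$, and derive the lower Lipschitz bound for $\widetilde F$ from the quasi-geodesic orbit condition (2) of \cref{defn:MCW.equivariant}, exactly as in the paper. Your bookkeeping is if anything slightly more careful (you correctly minimize over $n_0\in N$, matching the $N$-invariance of $\iota_X$, where the paper writes $g\in\Gamma$), and your explicit verification of linear properness for $\widetilde F$ and for the map to $\bE_\Gamma$ supplies detail the paper leaves implicit.
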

\begin{proof}
    Let $\bV$ be the group ring $\bR[\Gamma/N]$ regarded as the Euclidean space equipped with the $\ell^2$-distance, on which $\Gamma$ acts isometrically through the quotient $\Gamma \to \Gamma/N$. 
    Then, the finite ${\Gamma/N}$-CW-complex $X/N$ admits a ${\Gamma/N}$-equivariant bi-Lipschitz embedding into $\bV_{\Gamma/N}^{n_X}$ for some $n_X >0$. 
    Indeed, if $\iota_0 \colon X/N \to \bR^{n_X}$ is a non-equivariant bi-Lipschitz embedding, then 
    \[
    \iota(\bm{x}) \coloneqq (\iota_0(g^{-1}\bm{x}))_{g \in {\Gamma/N}} \in (\bR^{n_X})^{{\Gamma/N}} = \bV^{n_X}
    \]
    is an equivariant embedding. Let $\iota_X \coloneqq \iota \circ q$.
    Then, the product $F \times \iota_X \circ q \colon X \to Y \times \bV_{\Gamma/N}^{n_X}$ is injective, $\Gamma$-equivariant, and Lipschitz continuous. It is also bi-Lipschitz. Indeed, by taking $A \geq 1$ such that $\rmd(F(\bm{x}), F(\bm{y})) \leq A\rmd(\bm{x},\bm{y})$ and $g \in \Gamma$ attaining the minimum of $\rmd(g\bm{x},\bm{y})$, we have $(2A)^{-1} \leq 1$ and $\rmd(\iota_X(\bm{x}),\iota_X(\bm{y})) = \rmd(g\bm{x},\bm{y})$, and hence
    \begin{align*}
        {}&{} \rmd(F(\bm{x}) , F(\bm{y})) + \rmd(\iota_X(\bm{x}), \iota_X(\bm{y})) \\
        \geq {}&{}  (2A)^{-1} (\rmd(F(\bm{x}),F(g\bm{x})) - \rmd(F(g\bm{x}), F(\bm{y}))) + \rmd(g\bm{x},\bm{y})\\
        \geq {}&{} (2A)^{-1}w_X^{-1}w_Y^{-1} \rmd(\bm{x},g\bm{x}) + 2^{-1} \rmd(g\bm{x},\bm{y}) \geq \min \{ (2A)^{-1}w_X^{-1}w_Y^{-1}, 2^{-1}\} \rmd(\bm{x},\bm{y}). 
    \end{align*}
    The latter claim follows from \cref{rmk:equivariant.universal}. 
\end{proof}

\begin{rmk}
As a consequence of the equivariant bi-Lipschitz embeddability of $X$ into a Euclidean space, an equivariant version of \cref{rmk:embed.CWfin.EucCW} also holds; there is a full subcategory $\CW_{\Gamma, \mathrm{fp}}^{\iota} \subset \ECW_{\Gamma, \mathrm{fp}}$ such that the forgetful functor $\CW_{\Gamma,\mathrm{fp}}^\iota \to \CW_{\Gamma,\mathrm{fp}}$, where $\CW_{\Gamma,\mathrm{fp}}$ denotes the category of finite proper $\Gamma$-CW-complexes and continuous $\Gamma$-maps, is essentially surjective and induces isomorphisms of homotopy sets $[X,Y]_{\mathrm{Lip}}^\Gamma \cong [X,Y]^{\Gamma}$.
\end{rmk}

Let $\mathsf{Sub}_{\fin} (\Gamma)$ denote the category whose objects are finite subgroups of $\Gamma$ and whose morphisms are group homomorphisms of the form $\Ad (g^{-1}) \colon H \to g^{-1}Hg \subset K$ by $g \in \Gamma$. 
We describe a set of internal degrees of freedom on a proper $\Gamma$-space in terms of functors from this category. 
For finite groups, $H_1$ and $H_2$, let $\fR_{H_1}$ and $\fR_{H_2}$ be a set of internal degrees of freedom with on-site symmetry of $H_1$ and $H_2$, respectively. 
A morphism $\fR_{H_1} \to \fR_{H_2}$ is given by a triple $(\psi, \eta, v)$, where
\begin{itemize}
    \item $\psi \colon H_1 \to H_2$ is a group homomorphism, 
    \item $\eta \colon \fR_{H_1} \to \fR_{H_2}$ is a map, and 
    \item $v_\lambda \colon \sH_{\lambda} \to \sH_{\eta (\lambda)}$ is a family of unitaries such that $v_{\lambda} \Omega_{\lambda} = \Omega_{\eta \lambda}$ and $v_\lambda u_{\lambda}(g) v_{\lambda}^* = u_{\eta(\lambda)}(\psi(g))$. 
\end{itemize}
This morphism set forms the category $\mathsf{ID}$ of internal degrees of freedom with finite group symmetry. Both $\mathsf{Sub}_{\fin} (\Gamma)$ and $\mathsf{ID}$ has the forgetful functors to the category $\mathsf{Grp}_{\fin}$ of finite groups.  

\begin{defn}\label{defn:equivariant.coarse}
Let $\Gamma$ be a crystallographic group and let $X \in \ECW_{\Gamma,\mathrm{p}}$. Let $\bV$ be the representation of $\Gamma/N$ as in \cref{rmk:embedding.equivariant}.  
    \begin{enumerate}
        \item A set of internal degrees of freedom with spatial $\Gamma$-symmetry is a functor $\fR \colon \mathsf{Sub}_{\fin}(\Gamma) \to \mathsf{ID}$ that commutes with the forgetful functors to $\mathsf{Grp}_{\fin}$, i.e., $\mathsf{fg} \circ \fR=\mathsf{fg}$. More explicitly, each $\fR_K$ has the internal symmetry of $K$ and the morphism $(\psi_g, \eta_g,v_g) \colon \fR_H \to \fR_K$ induced by $g \in \Gamma$ satisfies $\psi_g = \Ad (g^{-1})$. By abuse of notation, we write 
        \begin{align*} 
        X \times \fR \coloneqq \bigsqcup_{\bm{x} \in X} \fR_{\Gamma_{\bm{x}}}.
        \end{align*}
        \item Let ${}_\star\fL_{\fR,X,\Gamma}$ denote the set of $\Gamma$-invariant subsets $\Lambda_\star$ of $X \times \bV^\infty \times \fR \times \bN$ satisfying (i), (ii) of \cref{defn:lattice.set,defn:gapped.Hamiltonian.X}, while $\fL_{\fR,X,\Gamma}$ denote the set of subsets $\Lambda$ of $X \times \bV^\infty \times \fR \times \bN$ satisfying (i), (ii), (iii) of \cref{defn:lattice.set,defn:gapped.Hamiltonian.X}. 
        For $\Lambda \in {}_\star\fL_{\fR,X,\Gamma}$, the C*-algebra $\cA_{\Lambda} \coloneqq \bigotimes_{\bm{x} \in \Lambda} \cA_{\lambda(\bm{x})}$ is equipped with a $\Gamma$-action 
        \begin{align*} 
        \beta_g \coloneqq \bigotimes_{\bm{x} \in \Lambda_\star}  \big( \Ad (v_{g , \lambda(\bm{x})}) \colon \Aut(\cA_{\bm{x}}) \to \Aut(\cA_{g\bm{x}}) \big) 
        \end{align*}
        by $\ast$-automorphisms preserving the almost local subalgebra $\cA_{\Lambda_\star}^{\al}$. Moreover, $\beta_g$ preserves the almost local norms $\vvert \blank \vvert_{f} $ on $\fDer ^{\al}_{\Lambda_{\star}}$ and $\fH_{\Lambda_{\star}}^{\al}$.  
        \item For $\Lambda \in {}_\star \fL_{\fR,X,\Gamma}$, we say that $\sfH \in \fH_{\Lambda_{\star}}^{\al}$ is \emph{almost properly supported} if there is a subset $\Lambda^\circ \leq \Lambda$ contained in the class $\fL_{\fR,X,\Gamma}$ such that the constant $K_{\Lambda, \sfH-\sfh,\nu,\mu}^{(k)} $ in \eqref{eqn:constant.K} is finite for any $f_{\nu,\mu} \in \cF_1$, $k \in \bN$ and $\sU$. 
        for any $f \in \cF$. We write ${}_\star \sGP(\Lambda \midbar \sM)$ for the set of smooth families of almost properly supported gapped UAL Hamiltonians. This forms a $\Gamma$-sheaf on $\Man$. We define the colimit $\Gamma$-sheaf 
        \begin{align*}
            {}_\star\sGP (X\midbar \blank) \coloneqq \colim_{\Lambda \in {}_\star \fL_{\fR,X,\Gamma}}{}_\star \sGP (\Lambda\midbar \blank ). 
        \end{align*}
        As we have repeated in this paper, as in \eqref{eqn:composition.lattice} and \eqref{eqn:composition.system}, the composite operation $\blank \boxtimes \blank$ makes ${}_\star\sGP(X\midbar \blank)$ a local commutative H-monoid with a weakly strict unit. We define the invertible subsheaf ${}_\star\sIP(X\midbar \blank)$ and its refinement ${}_\star\bsIP(X\midbar \blank)$ as in \cref{lem:bold.invertible.sheaf} ,both of which are $\Gamma$-sheaves.
        \item Let ${}_\star\fL_{\fR,X,\Gamma}^{\loc}$ denote the set of lower semi-continuous family ${}_\star\mathbbl{\Lambda} = \{ \Lambda_{\star,s} \}_{s \in [1,\infty)}$ of subsets of $X \times \bV^\infty \times \fR \times \bN^2$ such that each $\Lambda_s$ is $\Gamma$-invariant and ${}_\star \mathbbl{\Lambda}$ satisfies (i), (ii) of \cref{defn:localizing.path}. 
        \item We say that a gapped localization flow of UAL Hamiltonians $\sfH $ supported on $\mathbbl{\Lambda} \in {}_\star \fL_{\fR,X,\Gamma}^\loc$ is almost properly supported if there is a family $\mathbbl{\Lambda}^\circ \in \fL_{\fR,X,\Gamma}^\loc$ such that $\Lambda_s^\circ \leq \Lambda_{s}$ and the constant $K_{\loc,\Lambda, \sfH-\sfh,\nu,\mu}^{(k)} $ in \eqref{eqn:constant.K.local} is finite for any $f_{\nu,\mu} \in \cF_1$, $k \in \bN$ and $\sU$. 
        We write ${}_\star \sGP_{\loc}(\mathbbl{\Lambda} \midbar \sM)$ for the set of smooth families of almost properly supported localization flows. This forms a $\Gamma$-sheaf on $\Man$. As well as (3), for $X \in \ECW^\Gamma$, we define the $\Gamma$-sheaf 
        \[ 
        {}_{\star} \sGP_{\loc}(X\midbar \blank)\coloneqq  \colim_{\mathbbl{\Lambda} \in {}_\star \fL_{\fR,X,\Gamma}^\loc }{}_{\star} \sGP_{\loc}(\mathbbl{\Lambda}\midbar \sM),
        \]
        regard it as a local commutative H-monoid with a weakly strict unit, and define the invertible subsheaf ${}_{\star} \sIP_{\loc}(X\midbar \blank)$ and its refinement ${}_{\star} \bsIP_{\loc}(X\midbar \blank)$. 
        \item For a $\Gamma$-equivariant Lipschitz continuous map $F \colon X \to Y$, the push-forward $F_*$ is defined in the same way as \eqref{eqn:push.quantum.system}, \cref{lem:MCW.induced.hom,prp:localizing.induced.hom} as
        \begin{align*}
        F_* \coloneqq {}&{} \colim_{\Lambda \in \fL_{\fR,X,\Gamma}} \Big( \widetilde{F}_* \colon \pst\sGP(\Lambda \midbar \sM ) \to \pst\sGP(\widetilde{F}(\Lambda) \midbar \sM) \Big) \colon \sGP(X \midbar \sM) \to \pst\sGP(Y \midbar \sM), \\
       F_* \coloneqq {}&{}  \colim_{\mathbbl{\Lambda} \in \fL_{\fR,X,\Gamma}^{\loc}} \Big( \widetilde{F}_* \colon \pst\sGP_{\loc}(\mathbbl{\Lambda} \midbar \sM ) \to \pst\sGP_{\loc}(\widetilde{F}(\mathbbl{\Lambda}) \midbar \sM) \Big) \colon \sGP_{\loc}(X \midbar \sM) \to \pst\sGP_{\loc}(Y \midbar \sM).
        \end{align*}
        Here, the map $\iota_X$ in $\widetilde{F} \coloneqq F \times \iota_X $ is given in \cref{rmk:embedding.equivariant}. 
    \end{enumerate}
\end{defn}

We remark that the representation $\bV = \bR[\Gamma/N]$ includes a direct summand $\bR \cdot 1$, and hence the space $X \times \bV^\infty \times \fR \times \bN^2$ contains the subspace $X \times \bR^\infty \times \fR \times \bN^2$. 
This enables us to apply the arguments in \cref{section:Kitaev} involving the Kitaev pump. 

As a consequence, all the results in \cref{section:Kitaev,section:localizing.path}
 are generalized to $\Gamma$-equivariant versions.
\begin{thm}\label{thm:equivariant.stable.homotopy.IP}
    Let $X, Y \in \ECW_{\Gamma, \mathrm{p}}$. The following hold. 
    \begin{enumerate}
        \item The Kitaev pump $\kappa_X$ and the localized Kitaev pump $\kappa_X^\loc$, both are $\Gamma$-equivariant, makes both  ${}_\star \sIP(X\midbar \blank)$ and ${}_\star \sIP_{\loc}(X\midbar \blank)$ naive $\Gamma$-equivariant $\Omega$-spectrum object of $\Sh(\Man)$.
        \item If large-scale Lipschitz $\Gamma$-maps $F_0, F_1 \colon X \to Y$ are $\Gamma$-equivariantly large-scale Lipschitz homotopic, then the induced maps $F_{0,*}, F_{1,*} \colon \sIP^\Gamma(X) \to \sIP^\Gamma(Y)$ are smoothly homotopic.
        Similarly, if Lipschitz continuous $\Gamma$-maps $F_0, F_1 \colon X \to Y$ are $\Gamma$-equivariantly Lipschitz homotopic, then the induced maps $F_{0,*}, F_{1,*} \colon \sIP_{\loc}^\Gamma(X) \to \sIP_{\loc}^\Gamma(Y)$ are smoothly homotopic.
        \item The assignment $X \mapsto \rIP_{\loc,n}^\Gamma (X) \coloneqq \pi_n({}_\star \sIP_{\loc}^\Gamma(X))$ gives a functor $\ECW_{\Gamma,\mathrm{p}} \to \mathsf{Ab}$ that is excisive and exact (with respect to $\Gamma$-CW-subcomplexes). 
        \item If $\Gamma$ is a finite group and $X $ is $\Gamma$-equivariantly scaleable, then $\mathtt{ev}_1 \colon \sIP_{\loc}^\Gamma(X) \to \sIP^\Gamma(X)$ is a weak equivalence.
    \end{enumerate}
\end{thm}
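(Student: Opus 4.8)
The plan is to verify that every construction and every homotopy used in the proofs of Sections~\ref{section:Kitaev} and~\ref{section:localizing.path} is compatible with the $\Gamma$-action, and to track the extra bookkeeping caused by the ``almost properly supported'' condition. The crucial structural observation is that the representation $\bV=\bR[\Gamma/N]$ contains a trivial summand $\bR\cdot 1$, so the extra line direction appearing in all Kitaev-pump constructions (e.g.\ $\Lambda\times\bZ\subset\Lambda\times\bR$, the cone $\mathbf{C}X$, the suspension $\boldsymbol{\Sigma}X$) sits inside $X\times\bV^\infty$ with trivial $\Gamma$-action. Hence the Kitaev pump $\kappa_X$, its truncated version $\kappa_X^R$, and the localized versions $\kappa_X^\loc$, $\kappa_X^{\loc,R}$ are well-defined as morphisms of $\Gamma$-sheaves: the formulas in \cref{defn:Kitaev.pump,defn:localizing.Kitaev.pump} use only the multiplication map $\fm$, the flip homotopy of \cref{lem:flip.homotopy}, and the adiabatic connection, all of which commute with $\beta_g$ --- for the flip homotopy because $\bR[\fS_4]$ commutes with the on-site representations (exactly as in the fermionic and on-site cases), and for the adiabatic connection because $\alpha_c(\sfG_\sfH\,;t)$ is $\Gamma$-equivariant whenever $\sfG_\sfH$ is, by uniqueness of solutions of the defining ODE.

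For part~(1), I would first generalize \cref{thm:inverse.pump} (lifting $\kappa_X$ to $\boldsymbol{\kappa}_X$) and \cref{lem:truncated.pump}: the reversed-path construction in \eqref{eqn:loop.reverse.homotopy} is purely formal and remains equivariant. Next comes the equivariant bulk--boundary correspondence, i.e.\ the analogues of \cref{prp:Eilenberg.swindle} and \cref{prp:flasque}: the Eilenberg-swindle lattice $\widetilde{\Lambda}$ is built $\Gamma$-invariantly since the shift $\widetilde{T}_d$ is in the trivial direction, and the swindle homotopy is the truncated pump, so the subsheaf of $\Gamma$-invariant almost properly supported sections supported on a half-space, or on a flasque space, is weakly contractible. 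Then the switching lemmas \cref{lem:switch.loop.equivalence,lem:weak.equivalence.stick,lem:switch.constant.equivalence,lem:switch.diagram.commute} and their localization analogues \cref{lem:switch.localization.flow} transcribe verbatim, because the adiabatic interpolation $\vartheta_d$ of \cref{thm:interpolation.loop} preserves $\Gamma$-invariance (again by ODE uniqueness). Combining these, $\boldsymbol{\kappa}_X$ and $\boldsymbol{\kappa}_X^\loc$ are weak equivalences onto the loop sheaves, proving (1). Part~(2) is the equivariant analogue of \cref{cor:coarse.homotopy.Hamiltonian,cor:localizing.coarse.homotopy}, proved via the truncated, respectively localized, Kitaev pump. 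Part~(3) follows by transcribing the proofs of \cref{prp:IP.homology.longexact}, \cref{prp:excision}, and \cref{thm:IP.bivariant}; the one genuinely new point is that the retract $P\colon X\to A$ used in \cref{lem:cut.project,prp:excision} must be chosen $\Gamma$-equivariantly, which one does cell-by-cell on the $\Gamma$-CW structure, or by averaging over a finite stabilizer --- this suffices because only the metric properness of $P$, not its continuity, is used.

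For part~(4), with $\Gamma$ finite, $\bE_\Gamma$ is a point, and every $X\in\ECW_{\Gamma,\mathrm p}$ admits a $\Gamma$-equivariant bi-Lipschitz embedding into a finite-dimensional representation $\bV^{n_X}$ by \cref{rmk:embedding.equivariant}; this is the setting in which the hypothesis of $\Gamma$-equivariant scaleability is meaningful (for a crystallographic $\Gamma$ the scaling map $\bm x\mapsto\bm x/2$ fails to commute with translations). I would then rerun the three-step proof of \cref{thm:scaleable} equivariantly: step~(a), $\mathtt{ev}_1$ is a fibration via the $\Gamma$-natural explicit lift $\widetilde{\sfH}(p,t\,;s)$; step~(b), surjectivity, where the equivariant coarse homotopy $\widetilde S_X$ from the definition of $\Gamma$-equivariant scaleability induces, by the equivariant version of \cref{cor:localizing.coarse.homotopy}, a morphism $\widetilde S_{X,*}\colon\sIP^\Gamma(X)\to\cP\sIP^\Gamma(X)$ with $\ev_0\circ\widetilde S_{X,*}=\id$ and $\ev_1\circ\widetilde S_{X,*}=S_{X,*}$, so iterating $S_{X,*}^{\lfloor s\rfloor-1}$ produces the required localization flow; and step~(c), the fiber $\mathtt{ev}_1^{-1}(\{\sfh\})$ is weakly contractible by the Qiao--Roe swindle, taking the product of shifted copies $\sfH(p\,;s)\boxtimes\sfH(p\,;s-1)\boxtimes\cdots$, which is supported on a $\Gamma$-invariant family $\widetilde{\mathbbl{\Lambda}}$ with each copy stored in a separate $\bN$-component, and absorbing the copies by the equivariant reshuffling homotopy \cref{lem:flip.homotopy.use}, giving $[\sfH]=0$.

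The main obstacle I anticipate is not a new deep idea but the systematic verification, at each stage, that the ``almost properly supported'' condition --- the finiteness of the constants $K^{(k)}_{\Lambda,\sfH-\sfh,\nu,\mu}$ in \eqref{eqn:constant.K} and $K^{(k)}_{\loc,\Lambda,\sfH-\sfh,\nu,\mu}$ in \eqref{eqn:constant.K.local} --- is preserved by the Kitaev pump, the adiabatic interpolation, the equivariant retract $P_*$, and the swindle. This is the one feature genuinely absent from the non-equivariant treatment, and handling it requires combining \cref{rmk:cutoff.approximate}, \cref{lem:LGA.cone.decomposition}, \cref{cor:Lieb.Robinson.approx.Ck}, and \cref{lem:cut.project} into estimates uniform over the localization parameter $s$ and over $\Gamma$-orbits. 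A secondary point needing care is the equivariant choice of retracts and partitions of unity, handled on the $\Gamma$-CW structure as indicated above.
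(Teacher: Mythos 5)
Your overall strategy — transcribing the non-equivariant arguments of Sections~\ref{section:Kitaev} and~\ref{section:localizing.path} and checking $\Gamma$-equivariance at each step — is exactly the route the paper takes. But you have missed the single point the paper identifies as the genuine difficulty, and as a result one step in your proof of part~(2) would actually fail.

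The difficulty is not, as you frame it, a matter of ``systematically verifying that the $K^{(k)}$-constants are preserved.'' The fundamental obstruction is that the truncation $\Theta_Z$ of \cref{prp:cut.trivial.Hamiltonian} and \cref{lem:cut.diffused} requires a linearly coarsely transverse pair of subspaces, and two $\Gamma$-invariant subspaces of a cocompact $\Gamma$-space are never coarsely transverse. So in the equivariant setting one can never apply $\Theta_Z$ to a $\Gamma$-invariant Hamiltonian. This is precisely why the definitions carry the $\star$-decoration (``almost properly supported''): the relaxed class is designed so that one can stop after the adiabatic interpolation without ever truncating. In most of the proofs this means one simply omits the final $\Theta_Z$; you list the Kitaev pump, the adiabatic interpolation, $P_*$, and the swindle as the operations that must preserve almost-proper support, but you omit the truncation, which is exactly the operation one must \emph{avoid}, not preserve.

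This omission bites concretely in your treatment of part~(2). You say \cref{cor:coarse.homotopy.Hamiltonian} ``transcribes verbatim, proved via the truncated Kitaev pump.'' But the non-equivariant proof of \cref{prp:coarse.homotopy} (on which \cref{cor:coarse.homotopy.Hamiltonian} rests) forms $\Theta_{\mathbf{I}_{2\varphi+r}\Lambda\setminus\mathbf{I}_\varphi\Lambda}\circ\vartheta(\boldsymbol{\kappa}_d^R(\bsfH))$; the truncated Kitaev pump diffuses outside the target region, and the argument only closes after the truncation cuts the diffusion off. In the equivariant setting that truncation is illegal, and without it the interpolated Hamiltonian is not supported where you need it. The paper's fix is to replace the truncation by the flip homotopy $\widetilde{\flip}$ of \cref{lem:flip.homotopy}, applied to the observable algebras at $\widetilde{F}'(\bm{x},n)$ and $F_1'(\bm{x},n)$ for $(\bm{x},n)\in\mathbf{I}_\varphi\Lambda$, with the almost-local norm control coming from the decay of the Hamiltonian away from $\mathbf{I}_{2\varphi+r}\Lambda\setminus\mathbf{I}_\varphi\Lambda$. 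Without this substitution your proof of part~(2) has a gap. Your remarks on the $\Gamma$-equivariant retract in part~(3) and the equivariant rerun of \cref{thm:scaleable} in part~(4) are fine and match the paper's intent.
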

\begin{proof}
    They are all proved in the same way as the corresponding statements in the non-equivariant setting. For (1), see \cref{lem:weak.equivalence.stick,lem:switch.constant.equivalence,lem:switch.loop.equivalence,lem:switch.diagram.commute} (and also \cref{lem:switch.localization.flow}). 
    For (2), see \cref{cor:coarse.homotopy.Hamiltonian,cor:localizing.coarse.homotopy}.
    For (3), see \cref{prp:excision,prp:IP.homology.longexact}. 
    Finally, for (4), see \cref{thm:scaleable}. 

    In the generalization to the equivariant setting, there is only one change: we can no longer perform truncation using \cref{lem:cut.diffused} since two $\Gamma$-invariant subspaces can never be coarsely transverse.  
    For this reason, we relax the definition, corresponding to the $\star$ notation, to accommodate almost properly supported Hamiltonians.
    In most cases, it is sufficient to avoid applying the truncation operation $\Theta_{Z}$. 
    However, one point requires some care: the proof of \cref{prp:coarse.homotopy,cor:coarse.homotopy.Hamiltonian}. 
    Here, the truncated Kitaev pump diffuses outside $\mathbf{I}_{2\varphi + r}\Lambda \setminus \mathbf{I}_\varphi \Lambda$, and if we do not truncate it, the proof will not work as is. 
    Therefore, we apply the flip homotopy $\widetilde{\flip}$ given in \cref{lem:flip.homotopy} to the observable algebras placed on $\widetilde{F}'(\bm{x},n), F_1'(\bm{x},n)$ for any $(\bm{x},n) \in \mathbf{I}_\varphi \Lambda$. 
    Since the two points $\widetilde{F}'(\bm{x},n)$ and $F_1'(\bm{x},n)$ are separated by a constant multiple of $|\varphi(x) - n|$, one might be concerned that the almost local norm increases during the flip homotopy. 
    However, because the Hamiltonian under consideration decays with distance from the $\mathbf{I}_{2\varphi +r}\Lambda \setminus \mathbf{I}_{\varphi}\Lambda$, this does not coarse a problem.
\end{proof}

Since $\Gamma$ is not compact, a $\Gamma$-manifold $\sM$ may not have the homotopy type of a $\Gamma$-CW-complex (unless the action is proper \cite{illmanExistenceUniquenessEquivariant2000}), and hence the set of smooth $\Gamma$-homotopy classes $\pst{\sIP}_{\loc}^\Gamma[X \midbar \sM]$ is no longer realized by the set of $\Gamma$-maps of topological spaces as $[\sM , \IP_{\loc}(X)]^\Gamma$. 
It is more relevant to treat the mapping sheaves 
\begin{align*}
    \pst\bsIP_d^N(X \midbar \sM \times \blank),\quad \pst\bsIP_{\loc,d}^N(X \midbar \sM \times \blank),
\end{align*}
regarded as $\Gamma/N$-sheaves. We define the associated $\Gamma/N$-equivariant realizations as
\begin{align*}
    \Hom_N(\sM,\IP_d)  \coloneqq {}&{} \big|\Sing_{\Gamma/N} \big(\pst\bsIP_d^N(X \midbar \sM \times \blank ))\big)\big|_{\Gamma/N}, \\
    \Hom_N(\sM,\IP_{\loc,d}) \coloneqq {}&{}  \big| \Sing_{\Gamma/N} \big( \pst\bsIP_{\loc,d}^N(X\midbar \sM \times \blank )\big) \big|_{\Gamma/N}.
\end{align*}

\begin{defn}\label{defn:spatial.equivariant.IP.cohomology}
For an intermediate subgroup $N \leq H \leq \Gamma$, we define the contravariant functors
\[  
    \rIP^H_{n-d}(X \midbar \sM \times \blank) \colon \kTop_{H/N} \to \mathsf{Ab}, \quad \rIP^H_{\loc,n-d}(X \midbar \sM \times \blank) \colon \kTop_{H/N} \to \mathsf{Ab}
\]
by using the $\Gamma/N$-realization (\cref{thm:equivariant.MW}) as
\begin{align*}
    \rIP^H_{n-d} (X \midbar \sM \times \sY ) \coloneqq {}&{} [ \Sigma^n\sY, \Hom_N(\sM,\IP_d) ]^{H/N} , \\
    \rIP_{\loc,n-d}^H(X \midbar \sM \times \sY) \coloneqq {}&{} [ \Sigma^n\sY, \Hom_N(\sM,\IP_{\loc,d})]^{H/N}.
\end{align*}
\end{defn}
This ambiguous notation is somewhat justified by the following isomorphism: When $\sX =\sN $ is a $H/N$-manifold, then we have isomorphisms
\begin{align*}
    \rIP^H_{-d} (X \midbar \sM \times \sN ) \cong {}&{} \pst\bsIP_d^H[X \midbar \sM \times \sN], \\
    \rIP_{\loc,-d}^H(X \midbar \sM \times \sN) \cong {}&{} \pst\bsIP_{\loc,d}^H[X\midbar \sM \times \sN ].
\end{align*}
Our primary interest is the case that the input to the cohomological variable is either $\sM=\pt$ or $\sM=\bE_{\Gamma}$. When $\Gamma$ is torsion-free and $\sM = \bE_{\Gamma} =E\Gamma$, the spaces $\Hom_\Gamma(\bE_{\Gamma}, \IP_d^\Gamma(X) )$ and $\Hom_\Gamma(\bE_{\Gamma}, \IP_{\loc,d}^\Gamma(X))$ is weakly equivalent to the homotopy fixed point of the spaces $\IP_d$ and $\IP_{\loc,d}$ respectively.

\begin{exmp}\label{exmp:MPS}
    The models of localization flows considered in \cref{exmp:models} are all $\Gamma$-invariant if so is the input data. For example, the Fidkowski--Kitaev model is a $1$-dimensional $\bZ$-invariant IG localization flow.
    This will be discussed in detail in \cite{kubotaStableHomotopyTheory2025b}. 
    
    Another example is the matrix product state \cites{fannesFinitelyCorrelatedStates1992,fannesFinitelyCorrelatedPure1994,perez-garciadandverstraetefrankandwolfmmandciracjiMatrixProductState2007}, a pure state on $\cA_{\bZ} = \bigotimes _{\bm{x}\in \bZ} \cA_{\bm{x}}$ with $\cA_{\bm{x}}=M_n(\bC)$ constructed from a class of $n$-tuple of matrices $\{A^i \}_{1 \leq i  \leq n } \subset \cB(\sV)$. 
    After numerous works like \cites{shiozakiHigherBerryCurvature2023,ohyamaHigherStructuresMatrix2024,ohyamaHigherBerryConnection2024,qiChartingSpaceGround2023,wenFlowHigherBerry2023}, a homotopy-theoretic formulation was finally given in \cite{beaudryClassifyingSpacePhases2025}. 
    In \cite{beaudryClassifyingSpacePhases2025}*{Theorem 6.6}, it is proved that the space $\cB$ of the above input data has the homotopy type $K(\bZ,2) \times K(\bZ,3)$.
    In the terminology of \cite{fannesFinitelyCorrelatedStates1992}, the MPS corresponds to an injective, ergodic, and purely generated state with correlation length $1$. Therefore, by \cite{fannesFinitelyCorrelatedStates1992}*{Theorem 6.4}, it is the unique gapped ground state of a UAL Hamiltonian called the parent Hamiltonian in the literature. 
    Such a state has the split property by \cite{matsuiBoundednessEntanglementEntropy2013}, and hence is invertible by \cref{lem:asymp.equal.state.invertible,prp:Eilenberg.swindle}. Therefore, we get a morphism 
    \begin{align*}
        \mathrm{MPS} \colon K(\bZ , 2) \times K(\bZ,3) \to \IP^{\bZ}_1. 
    \end{align*}
    It will be proved in \cref{thm:BCI} that the cohomology functor $\rIP_{\bZ}^1$ has a direct summand $\rIP_{\loc, \bZ}^1$ that is isomorphic to $H^2(\blank \,;\bZ) \oplus H^3(\blank \,;\bZ)$. In \cite{kubotaStableHomotopyTheory2025c}, we also show that $\mathrm{MPS}$ induces an isomorphism onto the direct summand $\rIP_{\loc,1}^{\bZ}$.  
\end{exmp}

\begin{exmp}\label{exmp:reflection.phase}
    Let $\Gamma= \bZ/2\bZ$ acts on $\bR$ by the reflection along the origin. In this case, we have a $\Gamma$-equivariant weak equivalence $\IP_* (\bR) \simeq \IP_{\loc,*}(\bR)$ by \cref{thm:equivariant.stable.homotopy.IP} (3). 
    Therefore, the long exact sequence of $\Gamma$-equivariant IP-homology groups
    \begin{align}
        \cdots \to \rIP_{\loc,1}^\Gamma(\bR,\pt) \to \rIP^\Gamma_{\loc,0}(\pt) \to \rIP_{\loc,0}^\Gamma(\bR) \to \rIP^\Gamma_{\loc,0}(\bR,\pt) \to \rIP_{\loc,-1}^\Gamma(\pt) \to \cdots \label{eqn:long.exact.exmp} 
    \end{align}
    is computable. First, by the excision axiom, we have
    \begin{align*} 
        \rIP^\Gamma_{\loc,*}(\bR,\pt) \cong \rIP^\Gamma_{\loc,*}(\bR \setminus [-1,1], \{ \pm 1 \}) \cong \rIP_{\loc,*}([1,\infty),\{1\}) \cong \rIP_{\loc,*-1}(\pt). 
    \end{align*}
    Second, since the origin $\pt$ is fixed by the $\Gamma$-action, we have $\rIP^\Gamma_{\loc,0}(\pt) \cong \rIP^\Gamma_{0}(\pt) \cong \mathrm{H}^2(\Gamma \,; \bZ) \cong \bZ/2$. Hence the long exact sequence \eqref{eqn:long.exact.exmp} is
    \begin{align*}
        \cdots \to 0 \to \bZ/2 \to \rIP_0^\Gamma(\bR) \to 0 \to \cdots  
    \end{align*}
    which concludes $\rIP_0^\Gamma(\bR) \cong \bZ/2$. This computation explains the topological origin of the $\bZ/2$-valued invariant given by Ogata in \cite{ogataIndexSymmetryProtected2021}. 
    We remark that this result differs from the case where $G=\bZ/2$ represents an on-site symmetry, in which we have $\rIP_{0,\loc}^{\bZ/2}(\bR) \cong H^3(\bZ/2 \,; \bZ) \cong 0$.
\end{exmp}

\begin{exmp}\label{exmp:reflection.onsite}    
    In addition to \cref{exmp:reflection.phase}, we further consider the on-site symmetry of $G$. Then, the long exact sequence corresponding to \eqref{eqn:long.exact.exmp} is computed by using the K\"unneth theorem as $\mathrm{IP}_{0,\loc}^{\Gamma \times G}(\pt ) \cong \mathrm{H}^2(\Gamma \times G\,; \bZ) \cong H^2(G\,; \bZ)  \oplus \bZ/2$, $\mathrm{IP}_{-1,\loc}^{\Gamma \times G}(\pt ) \cong \mathrm{H}^1(\Gamma \times G\,; \bZ) \cong H^1(G\,; \bZ)$, $\mathrm{IP}_{1,\loc}^{\Gamma \times G}(\bR,\pt ) \cong \mathrm{H}^2(G\,; \bZ)$, and $\mathrm{IP}_{0,\loc}^{\Gamma \times G}(\bR,\pt ) \cong \mathrm{H}^1(G\,; \bZ)$. That is, the sequence 
        \begin{align*}
        \cdots \to  \rIP^{\Gamma \times G}_{1,\loc}(\bR) \to \mathrm{H}^2(G\,; \bZ) \to \mathrm{H}^2(G\,; \bZ) \oplus \bZ/2 \to \rIP_{0,\loc}^{\Gamma \times G}(\bR) \to \mathrm{H}^1(G\,; \bZ) \to \mathrm{H}^1(G\,; \bZ)\to \cdots  
    \end{align*}
    is exact. 
    Again by the K\"{u}nneth theorem, the above morphisms $\mathrm{H}^2(G\,; \bZ) \to \mathrm{H}^2(G\,; \bZ)$ and $\mathrm{H}^2(G\,; \bZ) \to \mathrm{H}^2(G\,; \bZ)$ are multiplies by $2$. 
    Thus, the group $\IP_{0,\loc}^{\Gamma \times G} $ has $\bZ/2$ as a subgroup whatever $G$ is. If $G$ has $2$-torsion, then it also has other components coming from the non-triviality of the $2$-torsions of $\mathrm{H}^*(G\,; \bZ)$.

    However, if one forgets the reflection symmetry, then the inclusion $\pt \to \bR$ must induce the zero map. Indeed, it factors through a topologically flasque metric space $\bR_{\geq 0}$ as $\pt \to \bR_{\geq 0} \to \bR$ and $\bR_{\geq 0}$ is topologically flasque. This shows that the forgetful map $\rIP^0_{\Gamma \times G}(\bR) \to \rIP^0_G(\bR)$ vanishes. This means that a reflection invariant system can not have a non-trivial SPT index, which exhibits a similarity with a version of the Lieb--Mattis--Schultz theorem \cite{ogataGeneralLiebSchultzMattisType2021}.
\end{exmp}

\subsection{The covering isomorphism}
In this section, we prove the following theorem. 
\begin{thm}\label{thm:covering}
    Let $\Gamma$ be a crystallographic group and let $N \leq \Gamma$ be a torsion-free finite index normal subgroup. Then, for $X \in \ECW_{\Gamma, \mathrm{fp}}$, there is a weak equivalence of $\Gamma/N$-sheaves
    \begin{align*}
    \pst\sIP_{\loc}^N(X) \simeq  \pst\sIP_{\loc}(X/N ). 
    \end{align*}
\end{thm}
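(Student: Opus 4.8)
The plan is to realise, at the level of sheaves, the classical fact that for a free $N$-action the $N$-equivariant homology of $X$ agrees with the homology of $X/N$, via explicit descent and pull-back along the covering map $q\colon X\to X/N$. First I would fix the geometry: since $N$ is torsion-free and $\Gamma\curvearrowright X$ is proper, $N$ acts freely on $X$, so $q$ is a regular covering with deck group $N$, and condition (2) of \cref{defn:MCW.equivariant} gives a uniform lower bound $\rho_0\coloneqq\tfrac12 w_X^{-1}>0$ on its injectivity radius; hence for $\bar{\bm{x}}\in X/N$ and $r<\rho_0$ each ball $B_r(\bar{\bm{x}})$ lifts isometrically and $\cA_{B_r(\bar{\bm{x}})}\cong\cA_{B_r(\bm{x})}$ canonically. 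Freeness makes each stabiliser $\Gamma_{\bm{x}}$ inject into $\Gamma/N$, so the internal-degrees-of-freedom functor $\fR$ restricts to a set of internal degrees of freedom on $X/N$ with spatial $\Gamma/N$-symmetry, and an $N$-invariant localization lattice family $\mathbbl{\Lambda}\in\pst\fL_{\fR,X,\Gamma}^{\loc}$ descends to $\mathbbl{\Lambda}/N\in\fL_{\fR,X/N,\Gamma/N}^{\loc}$, compatibly with the growth and linear-properness conditions by the $w_X$-bound, with pull-back $q^{-1}$ a one-sided inverse.

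Throughout I would work with the refinements $\pst\bsIP_{\loc}$, so that invertibility is carried by the auxiliary triple and automatically preserved; the weak equivalence for the non-refined $\pst\sIP_{\loc}$ then follows by the same argument. Fix $f_0\in\cF_1$, and for a section $\bsfH$ write $\bsfH^{\mathrm{tr}}$ for its component-wise truncation $\Pi_{\bm{x},\rho_0/2}(\bullet_{\bm{x}}(\cdot\,;s))$, which is $\rho_0/2$-local in $\rmd$ for every $s$. Because the localization norm is measured with respect to the rescaled metric $\rmd_s=s\cdot\rmd$, the per-site difference $\bsfH-\bsfH^{\mathrm{tr}}$ decays like $f_0(s\rho_0/2)$, and — using the almost-proper-support witness $\mathbbl{\Lambda}^\circ$, the polynomial growth $Q_{\mathbbl{\Lambda}}$, and the $F$-function estimates of \cref{prp:Ffunction} — the full perturbation at scale $s$ is bounded by a quantity tending to $0$ as $s\to\infty$; hence after a large parameter shift $\sigma_u$ (\cref{rmk:parameter.shift}) it becomes uniformly $<1/8$. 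Let $\pst\bsIP^N_{\loc}(X)_D$ and $\pst\bsIP_{\loc}(X/N)_D$ be the subsheaves where this smallness holds for all $s\ge1$; the parameter-shift homotopy $t\mapsto\sigma_{tu}\bsfH$, which stays gapped by \cref{thm:gap.stability}, shows that the two subsheaf inclusions are $\Gamma/N$-equivariant weak equivalences.

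On $\pst\bsIP^N_{\loc}(X)_D$ I would define the descent $D(\bsfH)\coloneqq q_*(\bsfH^{\mathrm{tr}})$: the truncated local data lie in $\cA_{\mathbbl{\Lambda}/N}$ through the isometric-lift identification, the result is a gapped — hence, carrying along the refinement triple, invertible — localization flow on $X/N$ by \cref{thm:gap.stability}, and it is $\Gamma/N$-equivariant since $q\circ g=\bar{g}\circ q$; symmetrically, pull-back followed by truncation defines $P$ on $\pst\bsIP_{\loc}(X/N)_D$. A further truncation of $D(\bsfH)$ or of $P$ of a section changes nothing, so $D$ and $P$ carry these subsheaves into one another; moreover $P\circ D(\bsfH)=\bsfH^{\mathrm{tr}}$ and $D\circ P$ equals the truncation operator, because pulling a descended, already $\rho_0/2$-local flow back along the lift used to descend it returns it unchanged. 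The convex-combination homotopy $\bsfH\simeq\bsfH^{\mathrm{tr}}$ (gapped throughout by \cref{thm:gap.stability}) then exhibits $D$ and $P$ as mutually inverse up to homotopy, so $D$ is a weak equivalence of the subsheaves; composing with the subsheaf inclusions, and checking compatibility with passage to $H/N$-fixed points for each $N\le H\le\Gamma$ (the whole construction being manifestly functorial in $H$), yields the asserted weak equivalence of $\Gamma/N$-sheaves.

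I expect the main obstacle to be the quantitative bookkeeping around the truncation $\Pi_{\bm{x},\rho_0/2}$: one must show, uniformly in the localization parameter $s$, that $\bsfH-\bsfH^{\mathrm{tr}}$ is small enough to apply \cref{thm:gap.stability}, which requires balancing the $s$-rescaled metric, the polynomial growth $Q_{\mathbbl{\Lambda}}$, and the almost-proper-support decay governed by the constants $K^{(k)}_{\loc}$ — and it is exactly here that the freedom to pass to a parameter shift is indispensable, since the untruncated flow at $s=1$ need not be local enough to descend at all. A lesser point, automatic when $X$ is $\Gamma$-cocompact (the case relevant to \cref{thm:main3}) but needing a short argument otherwise, is to verify that the descended support witness $N\mathbbl{\Lambda}^\circ/N$ remains linearly proper in $X/N$, so that $D(\bsfH)$ is genuinely almost properly supported.
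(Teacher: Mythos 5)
There is a genuine gap in the way you justify the gappedness of the truncated flow. Your construction hinges on the claim that, after a parameter shift, the truncation $\bsfH^{\mathrm{tr}}\coloneqq\Pi_{\bm{x},\rho_0/2}(\bsfH)$ is again a \emph{gapped} localization flow, and that the convex-combination homotopy $\bsfH\simeq\bsfH^{\mathrm{tr}}$ stays gapped, citing \cref{thm:gap.stability}. But \cref{thm:gap.stability} (and everything in Appendix~B) is proved only for perturbations of a \emph{local}, non-interacting Hamiltonian $\sfH_0$, i.e.\ one with $\sfH_{0,\bm{x}}\in\cA_{\bm{x}}$. The Hamiltonian $\bsfH$ you are truncating is a general gapped UAL Hamiltonian, and you are viewing $\bsfH^{\mathrm{tr}}$ as a small perturbation of $\bsfH$ itself, not of $\sfh$. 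Stability of the gap under small (in almost-local norm) perturbations of a \emph{general} gapped UAL Hamiltonian is exactly what the first sentence of \cref{section:analysis.spin}'s appendix warns is not available: without an LTQO-type hypothesis there is no such theorem, and the paper deliberately restricts \cref{thm:gap.stability} to the non-interacting case. So the key step ``truncate, then descend'' cannot be justified with the tools at hand, and the entire direct-morphism $D$, as well as the homotopy $D\circ P\simeq\id$, is unsupported. (Note also that being invertible does not let you write $\sfH=\alpha(\sfh)$; only $\sfH\boxtimes\check\sfH=\alpha(\sfh)$, so you cannot reduce to a perturbation of $\sfh$ either.)

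The paper's proof circumvents this obstruction entirely by never truncating a Hamiltonian. Instead it introduces the $\Gamma/N$-sheaf $\pst\widetilde{\sIP}{}^{N}_{\loc}(X)$ of \emph{pairs} $(\widetilde{\sfH},\sfH)$ — one flow on $X$, one on $X/N$ — that are $\mathbbm{r}$-locally equivalent: their truncations to $\mathbbm{r}$-balls agree up to errors vanishing as $s\to\infty$. The crucial input is \cref{prp:locally.equivalent.propagation} and \cref{lem:locally.equivalent.adiabatic}, which show that local equivalence is preserved by LG automorphisms and by the adiabatic connection. This lets the adiabatic-interpolation machinery (hence all the spectrum axioms of \cref{prp:covering.homology.theory}) run \emph{in parallel on both components of the pair}, so that the two forgetful morphisms to $\pst\sIP^N_{\loc}(X)$ and $\pst\sIP_{\loc}(X/N)$ are morphisms of homology functors. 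The proof then reduces, via the cell-by-cell Mayer–Vietoris argument and \cref{lem:covering.cell} (where both forgetful maps are explicitly inverted by the additivity weak equivalence of \cref{prp:additivity} applied to $\Gamma\times_K D$), to the case of a single orbit, with no gap-stability issue for interacting Hamiltonians ever arising. If you want to salvage your direct approach you would first need to upgrade \cref{thm:gap.stability} to an LTQO-type stability theorem applicable to arbitrary invertible gapped UAL Hamiltonians, which is a substantially harder and independent problem.
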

Here, we say that a morphism $\phi \colon \sF \to \sG$ of $G$-sheaves is a $G$-weak equivalence if and only if it restricts to weak equivalences of subsheaves $\phi \colon \sF^H \to \sG^H$ for any $H \leq G$. 
By the Elmendorf theorem \cite{elmendorfSystemsFixedPoint1983} (see also \cref{subsection:Gsheaf.category}), a $G$-weak equivalence induces an isomorphism $\phi \colon \sF^G(\sM) \to \sG^G(\sM)$ for any $G$-manifold $\sM$. 
Therefore, \cref{thm:covering} implies the isomorphisms
\[
    \pst\sIP_{\loc}^\Gamma[X \midbar \sM] \cong  \pst\sIP_{\loc}^{\Gamma/N}[X/N \midbar \sM].
\]

Let $X$ be a proper finite $\Gamma$-CW-complex on which $N$ acts freely. Let $\widetilde{\mathbbl{\Lambda}} = (\widetilde{\Lambda}_s) \in {}_{\star}\fL_{X,\Gamma}^{\loc}$. We write $\mathbbl{\Lambda} = (\Lambda_s)_{s} \in {}_{\star}\fL_{X/N, \Gamma/N}^{\loc}$. 
We fix $\mathbbm{r} >0$ so that the restriction of the covering $q \colon X \to X/N$ to any open ball $B_{\mathbbm{r}}(\bm{x})$ of radius $\mathbbm{r}$ is a homeomorphism onto the image. 
\begin{defn}
    We say that IG localization flows $\widetilde{\sfH} \in \sGP_{\loc}(\widetilde{\mathbbl{\Lambda}} \midbar \sM)$ and $\sfH \in \sGP_{\loc}(\mathbbl{\Lambda} \midbar \sM)$ are $\mathbbm{r}$-locally equivalent if, for any relatively compact open chart $\sU$ and $k \in \bN$, 
    \begin{align*}
        \sup_{\tilde{\bm{x}} \in \widetilde{\Lambda}_s} \big\| \Pi_{\tilde{\bm{x}},\mathbbm{r}} \big( \widetilde{\sfH}_{\tilde{\bm{x}}}\big) - 
        \Pi_{\bm{x},\mathbbm{r}} \big( \sfH_{\bm{x}}\big) \big\|_{(s),\sU,C^k,\bm{x},\nu,\mu} 
    \end{align*}
    is finite and converges to $0$ as $s \to \infty$, where $\bm{x}\coloneqq q(\tilde{\bm{x}})$. We write $\pst\widetilde{\sIP}{}^{N}_{\loc}(X)$ for the sheaf of smooth families of $\mathbbm{r}$-locally equivalent pairs. 
\end{defn}
We remark that, in the above definition, the operators $\Pi_{\bm{x},\mathbbm{r}}  =\Pi_{B_{\mathbbm{r}}(\bm{x})}$ are taken with respect to the initial metric $\rmd$ rather than the rescaled one $\rmd_s$. 
Similarly, we call a pair $(\widetilde{\sfG},\sfG)$ of localization flows of UAL derivations $\mathbbm{r}$-locally equivalent if the same condition is satisfied.

\begin{prp}\label{prp:locally.equivalent.propagation}
    Let $(\widetilde{\sfH},\sfH)$ be a $\mathbbm{r}$-locally equivalent pair of IG localization flows and let $(\widetilde{\sfG},\sfG)$ be a $\mathbbm{r}$-locally equivalent pair of localization flows of UAL derivations. Then the pair $(\alpha(\widetilde{\sfG} \,; t)(\widetilde{\sfH}) , \alpha(\sfG\,; t)(\sfH))$ is also $\mathbbm{r}$-locally equivalent. 
\end{prp}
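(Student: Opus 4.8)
The plan is to localize the entire comparison: truncate both parallel transports to metric balls of a fixed unscaled radius on which the covering $q \colon X \to X/N$ restricts to a homeomorphism, transport these truncated data through $q$, and estimate the propagation error with the relative Lieb--Robinson bounds. Shrinking $\mathbbm{r}$ at the outset we may assume that $q$ is a homeomorphism on every ball of radius $8\mathbbm{r}$; fix this radius once and for all, so that all the auxiliary balls appearing below (of radius at most $4\mathbbm{r}$) map homeomorphically under $q$. Throughout, all estimates are taken in the rescaled metrics $\rmd_s$ and in the norms $\|\blank\|_{\sU,C^k}$ over a fixed relatively compact chart $\sU$; the essential feature of the localization formalism, as set up in \cref{lem:auto.localizing.path} together with \cref{lem:equi.polynomial.growth.inherit.2} and \cref{exmp:lattice.Rl}, is that every constant produced by the quasi-locality estimates below (the functions $\Upsilon^{(k)}$, $\Upsilon^{\mathrm{rel},(k)}$, $\Upsilon^{\star,(k)}$, the growth constants $\kappa$, $\kappa_{\bB}$, and $L_{\nu,\mu}$, $d_{\nu,\mu}$, $c_{\nu,\mu,A}$) can be bounded uniformly in $s \in [1,\infty)$, because $\Lambda_s$ carries $s$-uniform growth rates and bricks and the localization norms of $\widetilde{\sfG}, \sfG, \widetilde{\sfH}, \sfH$ are finite.

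First I would truncate the dynamics. Set $Y \coloneqq B_{\mathbbm{r}/2}(\tilde{\bm{x}})$. By \cref{lem:LGA.cone.decomposition} and \cref{cor:Lieb.Robinson.approx.Ck} (1) applied to $\mathtt{ev}_s\widetilde{\sfG}$ with respect to $\rmd_s$, together with the $C^k$-Lieb--Robinson bounds of \cref{prp:Lieb.Robinson.Ck}, one has
\[
\bigl\| \alpha(\widetilde{\sfG}\,;t)\bigl(\widetilde{\sfH}_{Y}\bigr) - \alpha\bigl(\Pi_{B_{2\mathbbm{r}}(\tilde{\bm{x}})}\widetilde{\sfG}\,;t\bigr)\bigl(\Pi_{B_{2\mathbbm{r}}(\tilde{\bm{x}})}(\widetilde{\sfH}_{Y})\bigr) \bigr\|_{\sU,C^k} \ \le\ C\cdot f_{\nu,\mu}(s\mathbbm{r}),
\]
with $C$ independent of $s$, $\tilde{\bm{x}}$, $t$ and $p \in \sU$; here $\widetilde{\sfH}_{Y}=\sum_{\tilde{\bm{y}}\in Y}\widetilde{\sfH}_{\tilde{\bm{y}}}$ is a sum of at most polynomially (in $s\mathbbm{r}$, uniformly in $s$) many generators, each almost local at its site, so the polynomial prefactor coming from the cardinality of $Y$ is absorbed into $f_{\nu,\mu}$ exactly as in \cref{lem:sum.exponential}. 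Since $B_{\mathbbm{r}/2}(\tilde{\bm{x}})$ has fixed unscaled radius while its $\rmd_s$-radius is $s\mathbbm{r}/2 \to \infty$, the right-hand side decays as $s \to \infty$. The identical reduction applies on $X/N$ to $\alpha(\sfG\,;t)(\sfH_{B_{\mathbbm{r}/2}(\bm{x})})$, so after applying the contractions $\Pi_{\tilde{\bm{x}},\mathbbm{r}}$ and $\Pi_{\bm{x},\mathbbm{r}}$ it suffices to compare the two truncated quantities.

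Next I would transport through $q$ and propagate the error. The $\ast$-isomorphism $q_* \colon \cA_{B_{2\mathbbm{r}}(\tilde{\bm{x}})} \xrightarrow{\ \cong\ } \cA_{B_{2\mathbbm{r}}(\bm{x})}$ identifies the two truncated systems. The hypotheses of $\mathbbm{r}$-local equivalence, applied at every site of $B_{2\mathbbm{r}}(\tilde{\bm{x}})$ together with the almost-locality of the generators and \cref{lem:sum.exponential} (to pass between ball-sums of generators and ball-sums of their $\Pi_{\bullet,\mathbbm{r}/2}$-truncations, which match under $q_*$ by definition for $\widetilde{\sfH}$ and by the derivation analogue for $\widetilde{\sfG}$), give
\[
\vvert \Pi_{B_{2\mathbbm{r}}(\tilde{\bm{x}})}\widetilde{\sfG} - q_*^{-1}\Pi_{B_{2\mathbbm{r}}(\bm{x})}\sfG \vvert_{(s),\sU,C^k,\nu,\mu} \longrightarrow 0 \quad\text{and}\quad \bigl\| \Pi_{B_{2\mathbbm{r}}(\tilde{\bm{x}})}(\widetilde{\sfH}_{Y}) - q_*^{-1}\Pi_{B_{2\mathbbm{r}}(\bm{x})}(\sfH_{q(Y)}) \bigr\|_{\sU,C^k} \longrightarrow 0
\]
as $s \to \infty$, uniformly in $\tilde{\bm{x}}$ and $p$. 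Feeding the first of these into the relative $C^k$-Lieb--Robinson estimate \cref{prp:Lieb.Robinson.Ck} (2) (cf.\ \cref{prp:Lieb.Robinson} (3) and \cref{lem:asymptotically.equal.derivation}), with its $s$-uniform $\Upsilon^{\mathrm{rel},(k)}$, and then using the second together with the continuity of $\alpha(\sfG\,;t)$ in the initial datum, shows that the two truncated outputs differ by a quantity vanishing as $s \to \infty$. Combining with the truncation step yields $\sup_{\tilde{\bm{x}}}\bigl\|\Pi_{\tilde{\bm{x}},\mathbbm{r}}(\alpha(\widetilde{\sfG}\,;t)(\widetilde{\sfH})_{B_{\mathbbm{r}/2}(\tilde{\bm{x}})}) - \Pi_{\bm{x},\mathbbm{r}}(\alpha(\sfG\,;t)(\sfH)_{B_{\mathbbm{r}/2}(\bm{x})})\bigr\|_{\sU,C^k}\to 0$, which is the assertion.

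The main obstacle is the $s$-uniformity bookkeeping: one must verify that truncating to a \emph{fixed} unscaled radius — rather than to one growing with $s$ — still produces errors $f_{\nu,\mu}(s\mathbbm{r})$ that decay, and that none of the Lieb--Robinson constants blow up in $s$. This is exactly what \cref{lem:equi.polynomial.growth.inherit.2}, \cref{exmp:lattice.Rl} and \cref{lem:auto.localizing.path} provide: the rescaled lattices $\Lambda_s$ have $s$-uniform polynomial growth and bricks, so all constants depend only on the uniformly finite localization norms. A secondary point requiring care is the passage between the ball-sums $\widetilde{\sfH}_{B_{\mathbbm{r}/2}(\tilde{\bm{x}})}$ appearing in the definition of $\mathbbm{r}$-local equivalence and the individual-generator data needed to run the relative Lieb--Robinson comparison, handled by the standard observation that a ball-sum of almost-local generators agrees with the ball-sum of their $\Pi_{\bullet,\mathbbm{r}/2}$-truncations up to an error governed by \cref{lem:sum.exponential}. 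This proposition supplies the analytic input for the covering weak equivalence \cref{thm:covering}, in which $\mathbbm{r}$-local equivalence is the relation intertwining $\pst\sIP_{\loc}^N(X)$ and $\pst\sIP_{\loc}(X/N)$.
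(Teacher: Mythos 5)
Your proposal is correct and takes essentially the same approach as the paper: truncate both parallel transports to balls of fixed unscaled radius centered at the relevant site, control the truncation errors by the $C^k$-Lieb--Robinson bounds (\cref{cor:Lieb.Robinson.approx.Ck}, \cref{lem:LGA.cone.decomposition}) so that they decay as $s\to\infty$ because the $\rmd_s$-radius of the ball grows, and compare the truncated evolutions via the relative $C^k$-Lieb--Robinson estimate (\cref{prp:Lieb.Robinson.Ck} (2)) together with the $\mathbbm{r}$-local equivalence hypothesis. The paper's proof is a terser version of the same triangle-inequality decomposition; you merely make explicit two things the paper leaves implicit — the identification through $q_*$ of the truncated data on $X$ and $X/N$, and the $s$-uniformity bookkeeping that ensures none of the quasi-locality constants degenerate.
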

\begin{proof}
Let $\widetilde{\sfG}_{\tilde{\bm{x}}}' \coloneqq \Pi_{\tilde{\bm{x}},\mathbbm{r}}(\widetilde{\sfG}_{\tilde{\bm{x}}})$ and $\sfG_{\bm{x}}' \coloneqq \Pi_{\bm{x},\mathbbm{r}}(\sfG_{\bm{x}})$. 
By assumptions, we have 
    \begin{align*}
        \vvert \widetilde{\sfG}' - \sfG' \vvert_{(s),\sU,C^k,\nu,\mu} \coloneqq \sup_{\tilde{\bm{x}} \in \widetilde{\Lambda}_s} \big\| \Pi_{\tilde{\bm{x}},\mathbbm{r}} \big( \widetilde{\sfG}_{\tilde{\bm{x}}}\big) - 
        \Pi_{\bm{x},\mathbbm{r}} \big( \sfG_{\bm{x}}\big) \big\|_{(s),\sU,C^k,\bm{x},\nu,\mu} \to 0
    \end{align*}
   as $s \to \infty$. Consider the estimate
    \begin{align}
    \begin{split}
        {}&{} \big\| \Pi_{\tilde{\bm{x}},\mathbbm{r}} \big( \alpha_p\big( \widetilde{\sfG}\,; t \big)
        \big( 
        \widetilde{\sfH}_{\tilde{\bm{x}}}
        \big)
        \big)  
        - 
        \Pi_{\bm{x},\mathbbm{r}} \big( \alpha_p(\sfG\,; t)(\sfH_{\bm{x}}) \big)\big\|_{(s),\sU,C^k,\tilde{\bm{x}},\nu,\mu}  \\
        \leq 
        {}&{} \big\| \alpha_p
        \big( \widetilde{\sfG}\,; t \big)
        \big( \widetilde{\sfH}_{\tilde{\bm{x}}} \big)
        -  
        \alpha_p \big(
        \Pi_{\bm{x},\mathbbm{r}}\widetilde{\sfG}' \,; t
        \big)
        \big(
        \Pi_{\tilde{\bm{x}},\mathbbm{r}}(\widetilde{\sfH}_{\tilde{\bm{x}}})
        \big)
        \big\|_{(s),\sU,C^k,\bm{x},\nu,\mu}  \\
        {}&{} + \big\| 
        \alpha_p \big( 
        \Pi_{\tilde{\bm{x}},\mathbbm{r}}
        \big( \widetilde{\sfG}' \big)  
        \,; t \big)
        (\Pi_{\tilde{\bm{x}},\mathbbm{r}}\big( \widetilde{\sfH}_{\tilde{\bm{x}}}\big)\big) 
        - 
        \alpha_p\big( \Pi_{\bm{x},\mathbbm{r}}(\sfG') \,; t\big)\big(\Pi_{\bm{x},\mathbbm{r}}\big( \sfH_{\bm{x}} \big) \big)\big\|_{(s),\sU,C^k,\bm{x},\nu,\mu} \\
        {}&{}+ \big\| \alpha_p(\sfG \,; t)(\sfH_{\bm{x}})  -  \alpha_p(\Pi_{\bm{x},\mathbbm{r}}(\sfG') \,; t)(\Pi_{\bm{x},\mathbbm{r}}(\sfH_{\bm{x}})) \big\|_{(s),\sU,C^k,\bm{x},\nu,\mu} .
    \end{split} \label{eqn:estimates.local.eq}
    \end{align}
    The first term is estimated as 
    \begin{align*}
        {}&{} \big\| \Pi_{\tilde{\bm{x}},\mathbbm{r}}\big( \alpha_p(\widetilde{\sfG}\,; t)(\widetilde{\sfH}_{\tilde{\bm{x}}}) \big)  -  \alpha_p(\Pi_{\bm{x},\mathbbm{r}}(\widetilde{\sfG}') \,; t)(\Pi_{\tilde{\bm{x}},\mathbbm{r}}(\widetilde{\sfH}_{\tilde{\bm{x}}})) \big\|_{(s),\sU,C^k,\bm{x},\nu,\mu} \\
        \leq {}&{}
        \big\| (\Pi_{\tilde{\bm{x}},\mathbbm{r}} - \Pi_{\tilde{\bm{x}},\mathbbm{r}/2})(\alpha_p(\widetilde{\sfG}\,; t)(\widetilde{\sfH}_{\tilde{\bm{x}}})) \big\|_{(s),\sU,C^k,\bm{x},\nu,\mu} \\
        {}&{}+
        \big\| 
        \Pi_{\tilde{\bm{x}},\mathbbm{r}/2}\big( 
        \alpha_p(\widetilde{\sfG}\,; t)
        - 
        \alpha_p(\widetilde{\sfG}'\,; t)\big)
        (\widetilde{\sfH}_{\tilde{\bm{x}}}) \big\|_{(s),\sU,C^k,\bm{x},\nu,\mu} \\
        {}&{}+
        \big\| \Pi_{\tilde{\bm{x}},\mathbbm{r}/2} \big( \alpha_p(\widetilde{\sfG}'\,; t)(\widetilde{\sfH}_{\tilde{\bm{x}}}-\Pi_{\tilde{\bm{x}},\mathbbm{r}}(\widetilde{\sfH}_{\tilde{\bm{x}}})) \big\|_{(s),\sU,C^k,\bm{x},\nu,\mu} \\
        {}&{} + \big\| (\id - \Pi_{\tilde{\bm{x}},\mathbbm{r}/2})(\alpha_p(\widetilde{\sfG} ' \,; t)(\Pi_{\tilde{\bm{x}}, \mathbbm{r}} \big( \widetilde{\sfH}_{\tilde{\bm{x}}})) \big) \big\|_{(s),\sU,C^k,\bm{x},\nu,\mu} .
    \end{align*}
    By \cref{cor:Lieb.Robinson.approx.Ck} (1) and \eqref{eqn:norm.comparison.truncated}, the first, the third, and the forth terms are bounded above by a constant multiple of $f_{\nu,\mu}(s\mathbbm{r}/2) \cdot \Upsilon^{(k)}_{\widetilde{\sfG},\nu,\mu_1}(t)$, $f_{\nu_k,\mu_{3k+1}}(s\mathbbm{r}/2) \cdot \Upsilon^{(k)}_{\widetilde{\sfG},\nu,\mu}(t)$, and $f_{\nu,\mu}(s\mathbbm{r}/2) \cdot \Upsilon^{(k)}_{\widetilde{\sfG}',\nu,\mu_1}(t)$,  respectively.
    By \cref{cor:Lieb.Robinson.approx.Ck} (1) (and the proof of (2)), the second term is also bounded above by a constant multiple of $f_{\nu,\mu}(s\mathbbm{r}/8) \cdot \upsilon(t)$, where $\upsilon(t)$ is a certain function on $\bR$ such that $\upsilon(t) \prec w_v(t)$. 

    The third term of \eqref{eqn:estimates.local.eq} is also estimated in the same way.  
    As for the second term of \eqref{eqn:estimates.local.eq}, we have
    \begin{align*}
        {}&{}\big\|  
        \alpha_p \big( \Pi_{\tilde{\bm{x}},\mathbbm{r}}(\widetilde{\sfG}')  \,; t \big)(\Pi_{\tilde{\bm{x}},\mathbbm{r}}\big( \widetilde{\sfH}_{\tilde{\bm{x}}}\big)\big) 
        - 
        \alpha_p\big( \Pi_{\bm{x},\mathbbm{r}}\sfG' \,; t\big)
        \big(\Pi_{\bm{x},\mathbbm{r}}\big( \sfH_{\bm{x}} \big) \big)
        \big\|_{(s),\sU,C^k,\bm{x},\nu,\mu}
        \\
        \leq {}&{} \Upsilon_{\Pi_{\tilde{\bm{x}},\mathbbm{r}}(\widetilde{\sfG}'), \Pi_{\bm{x},\mathbbm{r}}(\sfG') ,\nu,\mu }^{(k),\mathrm{rel}}(t) \cdot \vvert \Pi_{\tilde{\bm{x}},\mathbbm{r}}(\widetilde{\sfG}') - \Pi_{\bm{x},\mathbbm{r}}(\sfG') \vvert_{(s),\sU,C^k,\nu_k,\mu_{3k+1}} \cdot \vvert \widetilde{\sfH} \vvert_{(s),\sU,C^k,\nu_k,\mu_{3k+1}} \\
        {}&{} + \Upsilon^{(k)}_{\Pi_{\bm{x},\mathbbm{r}}(\sfG'),\nu,\mu}(t) \cdot \vvert \Pi_{\tilde{\bm{x}},\mathbbm{r}}(\widetilde{\sfH}_{\tilde{\bm{x}}})  - \Pi_{\bm{x},\mathbbm{r}}(\sfH_{\bm{x}}) \vvert_{(s),\sU,C^k,\nu,\mu}
    \end{align*}
    by \cref{prp:Lieb.Robinson.Ck} (1), (2). The both terms converges to $0$ as $s \to \infty$ by assumption. Indeed, we have 
    \begin{align*}
        {}&{} \vvert \Pi_{\tilde{\bm{x}},\mathbbm{r}}(\widetilde{\sfG}') - \Pi_{\bm{x},\mathbbm{r}}(\sfG') \vvert_{(s),\sU,C^k,\nu_k,\mu_{3k+1}} \\
        \coloneqq {}&{} \sup_{\bm{y} \in B_{\mathbbm{r}}(\bm{x})} \bigg\| \sum_{\tilde{\bm{y}} \in q^{-1}(\bm{y})} \Pi_{\tilde{\bm{x}},\mathbbm{r}}\Pi_{\tilde{\bm{y}},\mathbbm{r}}(\widetilde{\sfG}_{\tilde{\bm{y}}}) - \Pi_{\bm{x},\mathbbm{r}}\Pi_{\bm{y},\mathbbm{r}}(\sfG_{\bm{y}}) \bigg\|_{(s),\sU,C^k,\bm{y},\nu_k,\mu_{3k+1}} \\
        \leq {}&{} \sup_{\tilde{\bm{y}} \in B_{\mathbbm{r}}(\tilde{\bm{x}})} \| \Pi_{\tilde{\bm{x}},\mathbbm{r}}\Pi_{\tilde{\bm{y}},\mathbbm{r}}(\widetilde{\sfG}_{\tilde{\bm{y}}}) - \Pi_{\bm{x},\mathbbm{r}}\Pi_{\bm{y},\mathbbm{r}}(\sfG_{\bm{y}}) \|_{(s),\sU,C^k,\bm{y},\nu_k,\mu_{3k+1}}\\
        {}&{} + \sup_{\tilde{\bm{y}} \in B_{2\mathbbm{r}}(\tilde{\bm{x}}) \setminus B_{\mathbbm{r}}(\tilde{\bm{x}})} \| \Pi_{\tilde{\bm{x}},\mathbbm{r}}\Pi_{\tilde{\bm{y}},\mathbbm{r}}(\widetilde{\sfG}_{\tilde{\bm{y}}}) \|_{(s),\sU,C^k,\tilde{\bm{y}},\nu_k,\mu_{3k+1}}\\
        {}&{} + \sup_{\bm{y} \in B_{2\mathbbm{r}}(\bm{x}) \setminus B_{\mathbbm{r}}(\bm{x})} \| \Pi_{\bm{x},\mathbbm{r}}\Pi_{\bm{y},\mathbbm{r}}(\sfG_{\bm{y}}) \|_{(s),\sU,C^k,\bm{y},\nu_k,\mu_{3k+1}}.
    \end{align*}
    The first term converges to $0$ as $s \to \infty$ by assumption of $\mathbbm{r}$-local equivalence of $\widetilde{\sfG}$ and $\sfG$, and the second and the third terms converge to $0$ as $s \to \infty$ since $\widetilde{\sfG}$ is a localization flow.
\end{proof}

\begin{prp}\label{lem:locally.equivalent.adiabatic}
    Let $(\widetilde{\sfH}, \sfH)$ be an $\mathbbl{r}$-locally equivalent pair. Then the associated pair $(\sfG_{\widetilde{\sfH}},\sfG_{\sfH})$ is also $\mathbbl{r}$-locally equivalent. 
\end{prp}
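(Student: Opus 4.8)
The plan is to unwind the definition of the adiabatic connection $1$-form $\sfG_{\sfH}$ from \cref{defn:automorphic.connection} and verify that the $\mathbbm{r}$-locality condition propagates through each of the constituent operations: the time evolution $\tau_{\sfH,u}$, the differential $d\sfH$, the inner integral $\int_0^t \tau_{\sfH,u}(d\sfH_{\bm{x}})\,du$, and the outer integral against the weight $w_v(t)$. Since $\sfG_{\widetilde{\sfH},\bm{x}}(p) = -i\int_{-\infty}^\infty \big( \int_0^t \tau_{\widetilde{\sfH}(p),u}(d\widetilde{\sfH}_{\bm{x}}(p))\,du \big) w_v(t)\,dt$ (and analogously for $\sfH$), the goal is to bound
\[
    \sup_{\bm{x}} \big\| \Pi_{\bm{x},\mathbbm{r}}\big( \sfG_{\widetilde{\sfH},B_{\mathbbm{r}/2}(\bm{x})}(p) \big) - \Pi_{\bm{x},\mathbbm{r}}\big( \sfG_{\sfH,B_{\mathbbm{r}/2}(\bm{x})}(p) \big) \big\|_{\sU,C^k}
\]
and show it tends to $0$ as $s \to \infty$. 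Throughout, the key structural fact is that $\tau_{\sfH,u} = \alpha(\sfH\,dt\,; u)$ is a parallel transport, so the difference $\tau_{\widetilde{\sfH},u}(a) - \tau_{\sfH,u}(a)$ for $a$ supported near $\bm{x}$ can be controlled by \cref{prp:Lieb.Robinson.Ck} (2) (Lieb--Robinson estimate for differences of parallel transports) together with the assumption that $\vvert \widetilde{\sfH} - \sfH \vvert$ is small near $\bm{x}$ in the $\mathbbm{r}$-local sense. This is exactly the mechanism already used in \cref{prp:locally.equivalent.propagation} and \cref{cor:Lieb.Robinson.approx.Ck}, so I would cite those freely.

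First I would split the difference into the standard three-term telescoping estimate, as in the proof of \cref{prp:locally.equivalent.propagation}: replace the full derivation $\sfH$ by its truncation $\Pi_{\bm{x},r}(\sfH)$ to a ball of intermediate radius $r$ (growing like a small power of $s$), bound the two truncation errors using the Lieb--Robinson approximation \cref{cor:Lieb.Robinson.approx.Ck} (1) applied to the time evolution, and then in the middle compare $\tau_{\Pi_{\bm{x},r}(\widetilde{\sfH}),u}$ with $\tau_{\Pi_{\bm{x},r}(\sfH),u}$ using \cref{prp:Lieb.Robinson.Ck} (2), where the input $\vvert \Pi_{\bm{x},r}(\widetilde{\sfH}) - \Pi_{\bm{x},r}(\sfH) \vvert$ is controlled by the $\mathbbm{r}$-local equivalence of the pair $(\widetilde{\sfH},\sfH)$ provided $r < \mathbbm{r}/2$ (so the truncations see only the locally-equivalent part). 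Next I would handle the differential: $d\widetilde{\sfH}_{\bm{x}}$ and $d\sfH_{\bm{x}}$ are again $\mathbbm{r}$-locally equivalent because differentiation commutes with $\Pi_{\bm{x},\mathbbm{r}}$ and with the $C^k$-seminorms — this is essentially the observation in \cref{lem:smooth} (1) that $\partial^I$ commutes with truncation, so $d$ preserves the class of $\mathbbm{r}$-locally equivalent pairs. Combining these, the integrand $\tau_{\widetilde{\sfH},u}(d\widetilde{\sfH}_{\bm{x}})$ versus $\tau_{\sfH,u}(d\sfH_{\bm{x}})$ differs by a quantity that, in the $\Pi_{\bm{x},\mathbbm{r}}$-truncated $C^k$-norm, decays like $\Upsilon^{\mathrm{rel}}$-type functions times $f_{\nu,\mu}(\text{power of }s)$, which the weight $w_v(t)$ in \eqref{eqn:w.function.bound} renders integrable uniformly.

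The integration steps are then routine: for the inner integral $\int_0^t(\cdot)\,du$ one uses that the bound on the integrand is $\prec \exp(u^{\mu_\star})$-dominated (via \cref{prp:Lieb.Robinson.Ck}) and decays in $s$, so $\int_0^t$ produces at worst a polynomial factor in $t$ while retaining the $s$-decay; for the outer integral against $w_v(t)$, the rapid decay of $w_v$ from \eqref{eqn:w.function.bound} absorbs any such polynomial and the $\exp(t^{\mu_\star})$ growth, as in \cref{lem:adiabatic.well-defined}, leaving a finite constant times an $s$-decaying factor. Convergence of all these integrals uniformly in $s$ is guaranteed by the equi-polynomial-growth machinery (\cref{lem:equi.polynomial.growth.inherit.2}, \cref{exmp:lattice.Rl}) ensuring the constants $\Upsilon_{(s),\cdots}$, $L_{\cdots}$, $J_{\cdots}$ are bounded uniformly in $s\in[1,\infty)$, which was already the point of \cref{lem:auto.localizing.path}.

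I expect the main obstacle to be the bookkeeping of the three intermediate truncation radii: the $\mathbbm{r}$-locality window of the input pair, the truncation radius $r$ used inside the Lieb--Robinson comparison (which must grow with $s$ to kill the error but stay below $\mathbbm{r}/2$), and the radius $\mathbbm{r}$ of the final truncation $\Pi_{\bm{x},\mathbbm{r}}$ — one must choose these consistently so that all three error terms decay in $s$ simultaneously, and in particular so that the $s$-dependent truncation radius $r = r(s)$ (of order a small power of $s$) fits inside the fixed window $\mathbbm{r}$ for all large $s$. A secondary subtlety is that $\mathbbm{r}$ is a fixed (absolute) scale, not scaling with $s$, so one measures distances with $\rmd$ rather than $\rmd_s$ in the final truncation; I would need to track carefully that the Lieb--Robinson speed estimates, which are naturally phrased in terms of $\rmd_s$, still give decay when the final truncation is at fixed Euclidean scale $\mathbbm{r}$ — this works because $\rmd_s \geq \rmd$ for $s\geq 1$, so the fixed-scale truncation is a coarser operation and the $\rmd_s$-decay is more than enough. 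Once these scale choices are pinned down, the proof is a direct combination of \cref{prp:Lieb.Robinson.Ck}, \cref{cor:Lieb.Robinson.approx.Ck}, \cref{lem:adiabatic.well-defined}, and \cref{lem:smooth}, entirely parallel to \cref{prp:locally.equivalent.propagation}.
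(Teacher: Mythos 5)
Your three-term split and choice of lemmas are exactly the paper's: the proof writes out the definition of $\sfG_{\sfH}$, bounds $\|\Pi_{\bm{x},\mathbbm{r}}(\sfG_{\widetilde{\sfH},\bm{x}}) - \Pi_{\bm{x},\mathbbm{r}}(\sfG_{\sfH,\bm{x}})\|$ by a double integral of $\|\tau_{\widetilde{\sfH},u}(d\widetilde{\sfH}_{\bm{x}}) - \tau_{\sfH,u}(d\sfH_{\bm{x}})\|$, telescopes the integrand into two truncation errors plus a middle comparison term exactly as in \cref{prp:locally.equivalent.propagation}, and closes the integrals with dominated convergence using the bounds of \cref{prp:Lieb.Robinson.Ck}, \cref{cor:Lieb.Robinson.approx.Ck}, and the decay of $w_v$. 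Your citations and the overall mechanism are therefore correct.

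However, the plan as stated contains a genuine inconsistency that you flag but do not resolve. You propose truncating at an intermediate radius $r = r(s)$ "of order a small power of $s$" and simultaneously require $r(s) < \mathbbm{r}/2$ so that the truncated data see only the locally-equivalent part. These two requirements cannot both hold for all large $s$; this is not bookkeeping but a logical obstruction. The paper's proof never uses an $s$-dependent radius at all — it truncates at the fixed Euclidean scale $\mathbbm{r}$ in every term (this is why $\Pi_{\bm{x},\mathbbm{r}}$ appears throughout \cref{prp:locally.equivalent.propagation} as well). The $s$-decay of the two outer truncation-error terms does not come from pushing the truncation radius out; it comes from the fact that the norms $\|\cdot\|_{(s),C^k,\bm{x},\nu,\mu}$ are measured in the rescaled metric $\rmd_s$, under which the fixed ball $B_{\mathbbm{r}}^{\rmd}(\bm{x}) = B_{s\mathbbm{r}}^{\rmd_s}(\bm{x})$ has radius $s\mathbbm{r} \to \infty$. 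Plugging that distance into the Lieb--Robinson approximation bound yields $f_{\nu,\mu}(s\mathbbm{r}) \to 0$. Your "secondary subtlety" paragraph is in fact touching this point — the observation that $\rmd_s \geq \rmd$ and the fixed-scale truncation "works" — but it is the entire mechanism, not a side remark, and it removes the need for the growing $r(s)$ that your main plan hinges on. To repair the proposal: drop the intermediate radius entirely, truncate at fixed $\mathbbm{r}$ in all three terms, control the outer terms by \cref{cor:Lieb.Robinson.approx.Ck} (1) in the $(s)$-norm, and control the middle term by \cref{prp:Lieb.Robinson.Ck} (1)--(2) using the $\mathbbm{r}$-local equivalence hypothesis for both the change of generator and the change of argument $d\widetilde{\sfH}_{\bm{x}}$ vs.\ $d\sfH_{\bm{x}}$.
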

\begin{proof}
    By \cref{prp:locally.equivalent.propagation}, we have that $(\tau_{\widetilde{\sfH},t}(d\widetilde{\sfH}), \tau_{\sfH,t}(\sfH))$ is an $\mathbbm{r}$-locally equivalent pair.  the integrand of the right hand side converges to $0$ as $s \to \infty$. 
    According to the estimates in \cref{prp:Lieb.Robinson.Ck,cor:Lieb.Robinson.approx.Ck}, the Lebesgue dominant convergence theorem can be applied to show that
    \begin{align*}
        {}&{} \| \Pi_{\tilde{\bm{x}},\mathbbm{r}}(\sfG_{\widetilde{\sfH},\tilde{\bm{x}}}) - \Pi_{\bm{x},\mathbbm{r}}(\sfG_{\sfH,\bm{x}})\|_{(s),\sU,C^k,\bm{x},\nu,\mu}  \\
        \leq {}&{} \int_{-\infty}^\infty \int_0^t \| \Pi_{\bm{x},\mathbbm{r}}(\tau_{\widetilde{\sfH} , u}(d \widetilde{\sfH}_{\bm{x}})) - \Pi_{\bm{x},\mathbbm{r}}(\tau_{\sfH , u}(d \sfH_{\bm{x}})) \|_{(s),\sU,C^k,\bm{x},\nu,\mu}    du \ w_v(t)dt 
    \end{align*}
    converges to $0$ as $s \to \infty$. 
\if0
    We have
    \begin{align*}
        {}&{} \| \Pi_{\tilde{\bm{x}},\mathbbm{r}}(\sfG_{\widetilde{\sfH},\tilde{\bm{x}}}) - \Pi_{\bm{x},\mathbbm{r}}(\sfG_{\sfH,\bm{x}})\|_{(s),\sU,C^k,\bm{x},\nu,\mu}  \\
        \leq {}&{} \int_{-\infty}^\infty \int_0^t \| \Pi_{\bm{x},\mathbbm{r}}(\tau_{\widetilde{\sfH} , u}(d \widetilde{\sfH}_{\bm{x}})) - \Pi_{\bm{x},\mathbbm{r}}(\tau_{\sfH , u}(d \sfH_{\bm{x}})) \|_{(s),\sU,C^k,\bm{x},\nu,\mu}    du \ w_v(t)dt \\
        \leq {}&{} 2 \int_0^\infty \bigg( \int_0^t 
                \big\| \Pi_{\tilde{\bm{x}},\mathbbm{r}}(\tau_{\widetilde{\sfH}, u}(d\widetilde{\sfH}_{\bm{x}}))  -  \Pi_{\bm{x},\mathbbm{r}}(\tau_{\Pi_{\bm{x},\mathbbm{r}}(\widetilde{\sfH}), u} (d\widetilde{\sfH}_{\bm{x}})) \big\|_{(s),\sU,C^k,\bm{x},\nu,\mu}  \\
        {}&{} \hspace{5ex}+ \big\|   \Pi_{\bm{x},\mathbbm{r}}(\tau_{\Pi_{\bm{x},r}(\widetilde{\sfH}), u}(d\widetilde{\sfH}_{\bm{x}}) ) - \tau_{\Pi_{\bm{x},\mathbbm{r}}(\sfH),u}( \Pi_{\bm{x},\mathbbm{r}}(d\sfH_{\bm{x}}))\big\|_{(s),\sU,C^k,\bm{x},\nu,\mu}  \\
        {}&{}\hspace{5ex} + \big\| \tau_{\sfH ,u}(\sfH_{\bm{x}}) - \tau_{\Pi_{\bm{x},\mathbbm{r}}(\sfH) , u}(d\sfH_{\bm{x}}) \big\|_{(s),\sU,C^k,\bm{x},\nu,\mu}  du \bigg) \cdot 
       w_v(t) dt.  
    \end{align*}
    In the same way as \cref{prp:locally.equivalent.propagation}, the integrand of the right hand side converges to $0$ as $s \to \infty$. 
    According to the estimates in \cref{prp:Lieb.Robinson.Ck,cor:Lieb.Robinson.approx.Ck}, the Lebesgue dominant convergence theorem can be applied to this convergence.    
\fi
\end{proof}

\begin{prp}\label{prp:covering.homology.theory}
    The following hold. 
    \begin{enumerate}
        \item The family $\{ \pst\widetilde{\sIP}{}_{\loc,d}^{N}, \kappa_{\loc,d}\}$ forms a $\Gamma/N$-spectrum. 
        \item The assignment $X \mapsto \pst\widetilde{\sIP}{}_{\loc}^{N}[X \midbar \sM]$ is a $\Gamma/N$-equivariant homology functor on $\ECW_{\Gamma, \mathrm{p}}$.
    \end{enumerate}
\end{prp}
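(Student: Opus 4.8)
\textbf{Proof proposal for \cref{prp:covering.homology.theory}.}
The plan is to transfer, almost verbatim, the entire package of results established for $\pst\sIP_{\loc}$ in \cref{section:localizing.path} and for $\pst\sIP_{\loc}^N$ in \cref{thm:equivariant.stable.homotopy.IP} to the sheaf $\pst\widetilde{\sIP}{}_{\loc}^N$ of $\mathbbm{r}$-locally equivalent pairs. The central point is that all of the structural constructions involved---the Kitaev pump, the localized Kitaev pump, the adiabatic interpolation, the flasque Eilenberg swindle, the cut-and-project map, the Mayer--Vietoris argument---are built out of parallel transports $\alpha(\sfG\,;t)$ along adiabatic connection $1$-forms $\sfG_{\sfH}$, and \cref{prp:locally.equivalent.propagation} together with \cref{lem:locally.equivalent.adiabatic} guarantees precisely that these operations preserve $\mathbbm{r}$-local equivalence. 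So the strategy is: for each statement we want about $\pst\widetilde{\sIP}{}_{\loc}^N$, run the corresponding proof in \cref{section:localizing.path} simultaneously on both components $(\widetilde{\sfH},\sfH)$ of a pair, and observe that the two resulting outputs are again $\mathbbm{r}$-locally equivalent.

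For part (1), I would first note that the localized Kitaev pump $\kappa_{\loc,d}$ of \cref{defn:localizing.Kitaev.pump}, being defined componentwise as a composite system built from the null-homotopy $\overline{\sfH}$ and its flips, sends an $\mathbbm{r}$-locally equivalent pair of IG localization flows to an $\mathbbm{r}$-locally equivalent pair (the construction in \cref{lem:spatial.localizing.path} only introduces new degrees of freedom via the homotopy $\overline{\sfH}$, which is itself controlled by $\sfG_{\overline{\sfH}}$ and hence, by \cref{lem:locally.equivalent.adiabatic}, respects local equivalence on both components). Then the lift $\boldsymbol{\kappa}_{\loc,d}$ and the reversed-path construction of \eqref{eqn:loop.reverse.homotopy} go through verbatim. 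To see that $\boldsymbol{\kappa}_{\loc,d}$ is a weak equivalence on $\pst\widetilde{\sIP}{}_{\loc}^N$, I would re-examine the chain of weak equivalences in \cref{lem:switch.localization.flow}: the adiabatic interpolation morphism $\vartheta_d$ of \cref{lem:switch.constant.equivalence} and the Eilenberg swindle of \cref{prp:flasque}, \cref{prp:Eilenberg.swindle} all act by LG automorphisms generated by truncations of adiabatic connections, so by \cref{prp:locally.equivalent.propagation} each of the intermediate sheaves $\cS$, $\bsIP(\blank)_{L,R\star}$ etc. admits its locally-equivalent-pair analogue and the same homotopies establish the same weak equivalences; the $\Gamma/N$-equivariance is inherited as in \cref{thm:equivariant.stable.homotopy.IP} since nothing in these arguments touches the group action except through the $\star$-relaxation already in place.

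For part (2), I would establish the Eilenberg--Steenrod axioms for $(X,A) \mapsto \pst\widetilde{\sIP}{}_{\loc}^N[X,A\midbar \sM]$ by copying \cref{prp:IP.homology.longexact} (exactness, from the fibration $\ev_1$ of \cref{lem:IP.homology.longexact}), \cref{prp:excision} (excision, via the cut-and-project morphism $\fT_{X,A}^t$, which is a composite of $\widetilde{T}_{X,*}$, $P_*$, and $\alpha(\Pi_{\mathbf{C}^2 B^c}^\star(\sfG_{\kappa^{\loc,R}})\,;t)$---all of which preserve $\mathbbm{r}$-local equivalence on pairs by \cref{prp:locally.equivalent.propagation}, \cref{lem:locally.equivalent.adiabatic}, and \cref{lem:cut.project} applied componentwise), and deducing finite additivity from excision; countable additivity under the metrical-control hypothesis follows the argument of \cref{prp:additivity}, again running it on both components. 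The equivariance with respect to $\Gamma/N$ and the $\Gamma$-equivariant homotopy invariance follow as in \cref{thm:equivariant.stable.homotopy.IP}(2),(3). The main obstacle---the only place where genuine new checking is required rather than mechanical transcription---is verifying that the various truncation and redistribution operations $\Theta_Y$, $\Pi_Y^\star$ appearing inside $\fT_{X,A}^t$ and inside the excision/additivity arguments do not destroy $\mathbbm{r}$-local equivalence: because these operations are local in a neighborhood of the cutting hypersurface, one must confirm that $\widetilde{\sfH}$ and $\sfH$, being $\mathbbm{r}$-locally equivalent (hence in particular agreeing up to $s$-decaying error on balls of radius $\mathbbm{r}$), are cut and redistributed consistently; this is where the choice of $\mathbbm{r}$ as a local-triviality radius for the covering $q$ is used, ensuring the combinatorics of bricks $B_\rho$ intersecting a ball of radius $\leq \mathbbm{r}/2$ match on $\widetilde{X}$ and $X/N$. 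I expect this verification to be routine given \cref{lem:cut.project} and the brick estimates of \cref{prp:brick}, but it is the one step warranting explicit attention rather than a citation.
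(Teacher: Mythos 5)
Your proposal is correct and follows essentially the same route as the paper's proof, which is considerably terser: the paper simply states that by \cref{prp:locally.equivalent.propagation,lem:locally.equivalent.adiabatic} the adiabatic interpolation construction of \cref{thm:interpolation.loop} preserves $\mathbbm{r}$-locally equivalent pairs, and hence the arguments of \cref{lem:weak.equivalence.stick,lem:switch.constant.equivalence,lem:switch.loop.equivalence,lem:switch.diagram.commute,prp:excision,lem:IP.homology.longexact} all transfer to the sheaf $\pst\widetilde{\sIP}{}_{\loc}^{N}$. Your more detailed unpacking, including the flagged caveat that the truncation and redistribution operations $\Theta_Y$ and $\Pi_Y^{\star}$ must be checked to respect $\mathbbm{r}$-local equivalence (which does go through, since $\Pi_{Y^c}$ is pointwise local and the remaining ingredients are parallel transports covered by \cref{prp:locally.equivalent.propagation}), identifies exactly the tacit checks the paper leaves to the reader.
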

\begin{proof}
    By \cref{prp:locally.equivalent.propagation,lem:locally.equivalent.adiabatic}, the adiabatic interpolation construction given in \cref{thm:interpolation.loop} preserves $\mathbbm{r}$-locally equivalent pairs. 
    Therefore, the arguments used in \cref{lem:weak.equivalence.stick,lem:switch.constant.equivalence,lem:switch.loop.equivalence,lem:switch.diagram.commute} and \cref{prp:excision,lem:IP.homology.longexact} are all applicable to the $\Gamma/N$-sheaf $\pst\widetilde{\sIP}{}_{\loc}^{N}$. 
\end{proof}

\begin{lem}\label{lem:covering.cell}
    Let $K$ be a finite subgroup of $\Gamma$, let $D \in \ECW_{K,\mathrm{fp}}$, and $X = \Gamma \times_K D$. Then the morphisms
    \begin{align*}
        \pst\sIP_{\loc}(X/N) \xleftarrow{e} \pst\widetilde{\sIP}{}_{\loc}^{N}(X) \xrightarrow{\tilde{e}} \pst\sIP_{\loc}^N(X),
    \end{align*}
    given by $\tilde{e}(\widetilde{\sfH},\sfH) \coloneqq \widetilde{\sfH}$ and $e(\widetilde{\sfH},\sfH) \coloneqq \sfH$, are both $\Gamma/N$-equivariantly weakly equivalent.
\end{lem}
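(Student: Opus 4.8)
The plan is to deduce the lemma from the homology-functor structure of \cref{prp:covering.homology.theory} together with the analytic toolkit already assembled, by showing directly that $e$ and $\tilde e$ induce bijections of (relative) smooth homotopy sets. First I would reduce to the non-equivariant statement: the pair sheaf $\pst\widetilde{\sIP}{}^{N}_{\loc}(X)$ and the forgetful morphisms are manifestly natural in $X$, and for $H\leq\Gamma/N$ the $H$-fixed subsheaf of each side is the corresponding object attached to the pair $(\widehat H,N)$, where $\widehat H\leq\Gamma$ is the preimage of $H$ (still acting on $X$ with $N$ a torsion-free normal subgroup of finite index acting freely). Hence a $\Gamma/N$-weak equivalence follows once we prove, for every such $(\widehat H,N)$, that $e$ and $\tilde e$ are plain weak equivalences; and by \cref{thm:MW} and the Whitehead theorem this amounts to showing that
$$e_*\colon \pst\widetilde{\sIP}{}^{N}_{\loc}[X\midbar\sM,\sA]\to \pst\sIP_{\loc}[X/N\midbar\sM,\sA],\qquad \tilde e_*\colon \pst\widetilde{\sIP}{}^{N}_{\loc}[X\midbar\sM,\sA]\to \pst\sIP{}^{N}_{\loc}[X\midbar\sM,\sA]$$
are bijective for all $(\sM,\sA)$.

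The geometric input is that, since $N$ is torsion free and $K$ is finite, $N\cap gKg^{-1}=\{e\}$, so $N$ acts freely on $X=\Gamma\times_K D$ with $X/N=(N\backslash\Gamma)\times_K D$ (compact when $D$ is a cell), the covering $q\colon X\to X/N$ is $\mathbbm{r}$-locally isometric in the sense fixed before the lemma, and $N$ acts \emph{trivially} on $\bV=\bR[\Gamma/N]$. From this one gets two routine facts: for $\mathbbl{\Lambda}\in\fL^{\loc}_{X/N}$ the preimage $q^{-1}(\mathbbl{\Lambda})$ is $\Gamma$-invariant and lies in ${}_\star\fL^{\loc}_{X,\Gamma}$ — weak equi-uniform discreteness and equi-polynomial growth pass up the covering (using \cref{lem:equi.polynomial.growth.inherit.2} and $X\in\ECW$), and linear properness over $X$ holds because the $\bV$-coordinates of $q^{-1}(\mathbbl{\Lambda})$ over a ball coincide with those of $\mathbbl{\Lambda}$ over the image ball; and the almost local $C^k$-seminorms of \eqref{eqn:localizing.norm} are preserved when transporting local generators of range $<\mathbbm{r}/4$ through $q$ in either direction. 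For surjectivity of $e_*$ I would, given a triple $\mathbf H$ on $X/N$, apply the parameter shift of \cref{rmk:parameter.shift} together with a gap-stability truncation (as in the proofs of \cref{thm:scaleable} and \cref{lem:cut.diffused}, and cf.\ \cref{rmk:cutoff.approximate}) to replace $\mathbf H$ by a homotopic triple of range $<\mathbbm{r}/4$ for all $s\geq1$, then lift its local generators through the local homeomorphism $q$ to an $N$-invariant triple $\widetilde{\mathbf H}$ on $q^{-1}(\mathbbl{\Lambda})$; the pair $(\widetilde{\mathbf H},\mathbf H)$ has ball-truncations agreeing exactly, hence is $\mathbbm{r}$-locally equivalent and maps to $[\mathbf H]$. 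Surjectivity of $\tilde e_*$ is dual: an $N$-invariant almost-properly-supported $\widetilde{\mathbf H}$ is, after the same shift-and-truncate, pushed down through $q$ (well defined by $N$-invariance) to a triple on $X/N$.

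Injectivity of both maps follows the same pattern over $\sM\times[0,1]$: a homotopy of the $X/N$-component lifts, a homotopy of the $X$-component descends, and one closes up the ends using the statement that two localization flows on $X$ that are each $\mathbbm{r}$-locally equivalent to the \emph{same} flow on $X/N$ are smoothly homotopic in $\pst\sIP{}^{N}_{\loc}(X)$. This last point is proved, as in \cref{prp:additivity}, by killing the difference with an LG automorphism whose generator is asymptotically small: the brick decomposition (\cref{prp:brick}) converts "ball-truncations agree as $s\to\infty$" into genuine asymptotic smallness of the difference of local generators, and a single further parameter shift makes this smallness uniform in $s$, so the convex-combination homotopy is gapped by a gap-stability argument (cf.\ the proof of \cref{prp:additivity}).

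I expect the main obstacle to be precisely this homotopy-uniqueness step — controlling, uniformly in the localization parameter, the passage from coincidence of ball-truncations to an $\mathrm{LG}$ interpolation with summably small generator — since it is the place where the "localizing" estimates of \cref{prp:Lieb.Robinson.Ck,cor:Lieb.Robinson.approx.Ck} must be combined with the brick bookkeeping on a space that is not itself quasi-isometric to $X/N$. A secondary technicality, vacuous when $D$ is a cell but requiring attention for general $D\in\ECW^K$, is checking that the push-down of an almost-properly-supported flow is again almost properly supported over $X/N$; this uses the cocompactness of the $\Gamma$-action and the $\mathbbm{r}$-local structure of $q$, and in the application to \cref{thm:covering} one may in any case restrict to cells.
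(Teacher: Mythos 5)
Your approach diverges from the paper's. The paper exploits the product structure: since $X=\Gamma\times_K D=\bigsqcup_{gK\in\Gamma/K}g\cdot D$ is a disjoint union of translates of $D$, one defines a direct morphism $c\colon \pst\sIP_{\loc}(X/N)\to \pst\sIP_{\loc}^N(X)$ sending a flow on $X/N$ to the product of its $N$-translates, shows $c$ is a weak equivalence by running the additivity argument of \cref{prp:additivity} equivariantly, and then observes that $c$ factors through the pair sheaf via $\sfH\mapsto(c(\sfH),\sfH)$, so that the weak-equivalence property of $e$ and $\tilde e$ is inherited. This is considerably shorter and more structural than your lift-and-truncate scheme.

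There is also a genuine gap in your surjectivity step. After parameter-shifting and truncating $\mathbf H$ on $X/N$ to range $<\mathbbm{r}/4$, you lift the local generators through $q$ and then treat the result $\widetilde{\mathbf H}$ as a section of $\pst\widetilde{\sIP}{}^{N}_{\loc}(X)$; but this presupposes that the lift is a \emph{gapped} UAL Hamiltonian (and, for the third entry of the triple, a gapped null-homotopy). That is not automatic for a free $N$-cover: a gapped short-range Hamiltonian on the (compact) quotient need not lift to a gapped Hamiltonian on the noncompact cover, because the GNS gap on a finite system does not control the thermodynamic gap of its $N$-periodic lift. What rescues the conclusion here is precisely the disjointness $X=\bigsqcup_{gK}g\cdot D$: after truncation the Hamiltonian is block-diagonal over components whose mutual distances tend to $\infty$ under the rescaled metric $\rmd_s$, so the lift is an infinite tensor product of copies of a gapped finite block and hence gapped. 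You never invoke this, and taken at face value your argument would prove the statement for arbitrary $X\in\ECW_{\Gamma,\mathrm{p}}$ with free $N$-action, which would make the Mayer--Vietoris reduction in \cref{thm:covering} superfluous --- a warning sign that a hypothesis is being used silently. Your self-diagnosis that the homotopy-uniqueness (injectivity) step is the main obstacle therefore misplaces the difficulty: the existence/gappedness of the lift is where the restriction to induced cells $\Gamma\times_K D$ actually enters.
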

\begin{proof}
    In this case, $X$ decomposes to the disjoint union of copies of $D$ as $\bigsqcup_{g K \in \Gamma/K} g \cdot D$. Hence there is a morphism $c \colon \pst\sIP_{\loc}(X/N) \to \pst\sIP_{\loc}^N(X)$ sending $\sfH \in \pst\sIP_{\loc}(X/N \midbar \sM)$ to the product of copies of it. It suffices to show that $c$ is a weak equivalence. This statement is proved in the same way as \cref{prp:additivity}, whose proof can immediately be made equivariant. 
\end{proof}

\begin{proof}[Proof of \cref{thm:covering}]
    By \cref{prp:covering.homology.theory,lem:covering.cell}, an iterated Mayer--Vietoris argument shows that the morphisms
    \begin{align*}
        \pst\sIP_{\loc}^{}(X/N) \leftarrow \pst\widetilde{\sIP}{}_{\loc}^{N}(X) \rightarrow \pst\sIP_{\loc}^N(X)
    \end{align*}
    are both $\Gamma/N$-equivariantly weakly equivalent for any $X \in \ECW_{\Gamma,\mathrm{fp}}$. 
\end{proof}

\subsection{Equivariant coarse assembly map}
The following definition is inspired from the work of G.~Yu \cite{yuCoarseBaumConnesConjecture1995}.  
\begin{defn}\label{defn:assembly.map}
    Let $\Gamma$ be a crystallographic group acting on the Euclidean space $\bE_{\Gamma}$. 
    We call the morphism of $\Omega$-spectra 
    \[
    \mu_{\rIP, \Gamma,N} \colon \IP_{\loc ,*}^\Gamma \to \IP_*^\Gamma
    \]
    induced by the forgetful maps of $\Gamma$-sheaves $\mathtt{ev}_{1} \colon \pst\bsIP_{\loc ,*}^\Gamma \to \pst\bsIP_*^\Gamma$ the \emph{equivariant coarse assembly map} of the equivariant IP cohomology theory.
\end{defn}

\begin{prp}\label{lem:proper.to.FC}
    Let $f \colon \sX \to \sY$ be a continuous map of $\Gamma$-CW-complexes that is $K$-equivariantly homotopy equivalent for any finite subgroup $K\leq \Gamma$. Then, for any $X \in \ECW_{\Gamma,\mathrm{p}}$, the induced map
    \[
        f^* \colon \rIP_{\loc}^\Gamma(X \midbar \sY) \to \rIP_{\loc}^\Gamma(X \midbar \sX)
    \]
    is an isomorphism. In particular,  
    \[
        \pr_{\sY}^* \colon \rIP_{\loc}^\Gamma(X \midbar \sY) \to \rIP_{\loc}^\Gamma(X \midbar \sY \times \mathbb{E}_{\Gamma})
    \]
    is an isomorphism. 
\end{prp}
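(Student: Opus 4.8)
The plan is to prove that $\rIP_{\loc}^{\Gamma}(X\midbar\blank)$, viewed as a $\Gamma$-equivariant cohomology theory on $\Gamma$-CW-complexes, inverts every $\Gamma$-map that restricts to a $K$-homotopy equivalence for all finite $K\leq\Gamma$ (an \emph{$\cF_{\fin}$-equivalence}), and then to read off both assertions. First I would record that $\rIP_{\loc}^{\Gamma}(X\midbar\blank)$ really is such a theory: it is represented by the $\Gamma$-realization of the naive $\Gamma$-$\Omega$-spectrum of \cref{thm:equivariant.stable.homotopy.IP}(1), so it enjoys the equivariant homotopy invariance of \cref{thm:equivariant.stable.homotopy.IP}(2), the long exact sequence and excision of \cref{thm:equivariant.stable.homotopy.IP}(3) and \cref{cor:mayer-vietoris}, and the finite additivity inherited from \cref{prp:additivity}; by the Elmendorf theorem \cite{elmendorfSystemsFixedPoint1983} it is determined by its restriction to the orbit category.

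Next I would reduce the Proposition to the ``in particular'' statement. Given an $\cF_{\fin}$-equivalence $f\colon\sX\to\sY$, the map $f\times\id_{\bE_{\Gamma}}\colon\sX\times\bE_{\Gamma}\to\sY\times\bE_{\Gamma}$ is a $\Gamma$-map between \emph{proper} $\Gamma$-CW-complexes --- every isotropy group that occurs is finite, being contained in an isotropy group of $\bE_{\Gamma}$ --- and on $H$-fixed points it is $f^{H}\times\id$ for finite $H$, a homotopy equivalence since $\bE_{\Gamma}^{H}$ is contractible (\cref{rmk:classifying.space}), and the empty map for infinite $H$; hence it is a $\Gamma$-homotopy equivalence by the equivariant Whitehead theorem, and $(f\times\id_{\bE_{\Gamma}})^{*}$ is an isomorphism by \cref{thm:equivariant.stable.homotopy.IP}(2). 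In the commuting naturality square associated with the projections $\pr_{\bE_{\Gamma}}$, the bottom edge is then an isomorphism and the two vertical maps are $\pr_{\bE_{\Gamma}}^{*}$; so if $\pr_{\bE_{\Gamma}}^{*}$ is always an isomorphism, then the top edge $f^{*}$ is. Since $\pr_{\bE_{\Gamma}}\colon\sY\times\bE_{\Gamma}\to\sY$ is itself an $\cF_{\fin}$-equivalence, the general statement is equivalent to the assertion that $\pr_{\bE_{\Gamma}}^{*}\colon\rIP_{\loc}^{\Gamma}(X\midbar\sY)\to\rIP_{\loc}^{\Gamma}(X\midbar\sY\times\bE_{\Gamma})$ is an isomorphism for every $\Gamma$-CW-complex $\sY$, which is what I would concentrate on.

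For this crux I would filter $\bE_{\Gamma}$ by its $\Gamma$-skeleta and run the usual $\varprojlim$/$\limone$ and Mayer--Vietoris induction (\cref{lem:IP.homology.longexact}, \cref{cor:mayer-vietoris}, finite additivity), reducing the claim to the contribution of the equivariant cells $\Gamma/K\ltimes S^{n}$ with $K$ finite --- equivalently to a descent statement: for every infinite subgroup $H\leq\Gamma$ the fixed-point subsheaf $\pst\sIP_{\loc}^{H}(X)$ should be the homotopy limit, over the finite subgroups $L\leq H$, of the subsheaves $\pst\sIP_{\loc}^{L}(X)$. Here the localization structure is indispensable --- this is precisely the place where the non-localized functor $\rIP^{\Gamma}$ is not known to behave --- and I would establish the descent by the coarse-geometric machinery that distinguishes $\IP_{\loc}$ from $\IP$: use the covering isomorphism (\cref{thm:covering}) to identify the fixed-point subsheaves for infinite isotropy with \emph{non-equivariant} localization sheaves over compact quotients, and then invoke the Eilenberg-swindle/scaleability mechanism (\cref{prp:flasque}, \cref{thm:scaleable}, in the equivariant form of \cref{thm:equivariant.stable.homotopy.IP}(4)) to exhibit the relevant relative term as flasque, hence trivial. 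The main obstacle is exactly this last step: making the descent rigorous, i.e.\ showing that the localized theory --- unlike the genuine equivariant theory --- does not see $\bE_{\Gamma}$ inserted into the parameter variable. Everything preceding it is a formal consequence of the equivariant cohomology-theory axioms already in place.
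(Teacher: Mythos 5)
Your proposal diverges fundamentally from the paper's argument, and the crux of your version — which you yourself flag as ``the main obstacle'' — is a genuine gap.

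The paper's proof decomposes the \emph{metric space} $X$ (the homological variable), not $\bE_{\Gamma}$. For a single equivariant cell $X \cong \Gamma \times_K D$ with $K$ finite, the equivariant version of \cref{prp:additivity} gives the induction isomorphism $\mathrm{ind}_K^\Gamma \colon \rIP_{\loc}^K(D\midbar\blank) \to \rIP_{\loc}^\Gamma(\Gamma\times_K D\midbar\blank)$, which reduces the $\Gamma$-statement to a $K$-statement. Since $K$ is finite, the hypothesis that $f$ is a $K$-homotopy equivalence is directly usable: $f^*$ is an isomorphism on $\rIP_{\loc}^K(D\midbar\blank)$ by $K$-homotopy invariance, hence on $\rIP_{\loc}^\Gamma(\Gamma\times_K D\midbar\blank)$ via the commuting square with $\mathrm{ind}_K^\Gamma$. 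For general $X$, one runs the five lemma through the long exact sequences of the skeletal pairs $(X_n,X_{n-1})$ in the $X$-variable. The hypothesis on $f$ is consumed at exactly the point where it is strongest — at finite subgroups — and everything else is the excision axiom in the homological variable, already established in \cref{thm:equivariant.stable.homotopy.IP}(3).

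Your proposal instead reduces to $\pr_{\bE_\Gamma}^*$ and then tries to filter $\bE_\Gamma$ in the \emph{cohomological} variable, landing on the assertion that $\pst\sIP_{\loc}^H(X)$ is a homotopy limit of $\pst\sIP_{\loc}^L(X)$ over finite $L\leq H$. This is a nontrivial properness claim about the representing $\Gamma$-spectrum that is not established anywhere, and the tools you cite in support of it — the covering isomorphism (\cref{thm:covering}), flasqueness (\cref{prp:flasque}), scaleability (\cref{thm:scaleable}, \cref{thm:equivariant.stable.homotopy.IP}(4)) — are all statements about the coarse geometry of the lattice space $X$; none of them says anything about how the fixed-point sheaves $\pst\sIP_{\loc}^H(X)$ vary over the orbit category. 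The Mayer--Vietoris and long exact sequences you invoke (\cref{lem:IP.homology.longexact}, \cref{cor:mayer-vietoris}) are likewise in the $X$-variable, not the $\sY$-variable, so they cannot feed a cell-by-cell induction on $\bE_\Gamma$. In effect your reduction to $\pr_{\bE_\Gamma}^*$ discards the directly usable hypothesis on $f$ and replaces it with an undemonstrated descent claim of comparable difficulty. To close the gap you would, in practice, have to decompose $X$ into cells $\Gamma\times_K D$ anyway, at which point you have recreated the paper's proof; the induction isomorphism (change of groups from $\Gamma$ to $K$) is the correct mechanism, not descent.
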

\begin{proof}
    We begin with the case of $X \cong \Gamma \times _K D$ for some finite subgroup $K \leq \Gamma$ and a $K$-space $D$. By \cref{prp:additivity}, whose equivariant version is proved in the same way, we have 
    \begin{align}
    \mathrm{ind}_{K}^\Gamma \colon \rIP_{\loc}^K(D \midbar \blank) \to \rIP_{\loc}^\Gamma (\Gamma \times_K D \midbar \blank), \quad \mathrm{ind}_{K}^\Gamma[\sfH] = \bigg[ \bigsqcup_{gK \in \Gamma/K}\beta_g(\sfH) \bigg] \label{eqn:induction}
    \end{align}
    is a natural isomorphism. Hence, we get the commutative diagram as
    \[
    \xymatrix{
        {\rIP}_{\loc}^K(D \midbar \sY) \ar[r]^{f^*} \ar[d]^{\mathrm{ind}_H^\Gamma}_{\cong} &  \rIP_{\loc}^K(D  \midbar \sX) \ar[d]^{\mathrm{ind}_H^\Gamma}_{\cong} \\
        \rIP_{\loc}^\Gamma(\Gamma \times_K D  \midbar  \sY) \ar[r]^{f^*} &  \rIP_{\loc}^\Gamma(\Gamma \times_K D  \midbar  \sX).
    }
    \]
    The top horizontal map is an isomorphism by assumption. 
    Hence, the same is true for the bottom horizontal map.

    For general $X$, an iterated use of the five lemma for long exact sequences associated with the pair $(X_n,X_{n-1})$ shows the desired isomorphism.
\end{proof}

\begin{lem}\label{lem:BC.proper}
    Let $\sX$ be a proper $\Gamma$-CW-complex. Then the morphism
    \begin{align*}
    \mu_{\mathrm{IP}, \Gamma,n} \colon \rIP_{n,\loc}^\Gamma(\bE_{\Gamma}  \midbar  \sX) \to \rIP_{n}^\Gamma(\bE_{\Gamma}  \midbar  \sX)
    \end{align*}
    is an isomorphism. 
\end{lem}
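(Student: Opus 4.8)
The plan is to treat both sides of $\mu_{\mathrm{IP},\Gamma,n}$ as $\Gamma$-equivariant cohomology theories in the proper $\Gamma$-CW-complex variable $\sX$, reduce by a cell-induction / Mayer--Vietoris argument to the case of an orbit $\sX=\Gamma/K$ with $K$ finite, and then exploit the fact that a finite group acting isometrically on $\bR^m=\bE_{\Gamma}$ fixes a point, so that $\mathrm{res}_K\bE_{\Gamma}$ is $K$-equivariantly scaleable and \cref{thm:equivariant.stable.homotopy.IP}(4) (the equivariant form of \cref{thm:scaleable}) applies.

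\emph{Reduction to orbits.} By \cref{thm:equivariant.stable.homotopy.IP}(1) the families $\{\IP^N_{\loc,d}(\bE_{\Gamma})\}_d$ and $\{\IP^N_d(\bE_{\Gamma})\}_d$ are naive $\Gamma/N$-equivariant $\Omega$-spectra, so the functors $\sX\mapsto\rIP_{n,\loc}^\Gamma(\bE_{\Gamma}\midbar\sX)$ and $\sX\mapsto\rIP_{n}^\Gamma(\bE_{\Gamma}\midbar\sX)$ of \cref{defn:spatial.equivariant.IP.cohomology} are $\Gamma$-equivariant cohomology theories on proper $\Gamma$-CW-complexes: homotopy invariance, additivity, and Mayer--Vietoris for $\Gamma$-subcomplexes all follow from the $\Omega$-spectrum structure, and $\mu_{\mathrm{IP},\Gamma,n}=\mathtt{ev}_{1,*}$ is a natural transformation between them. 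By \cref{rmk:classifying.space} a proper $\Gamma$-CW-complex is a $\Gamma$-$\cF_{\fin}$-CW-complex, whose cells have the form $\Gamma/K\times D^j$ with $K$ finite; so an induction over skeleta using the five lemma for the Mayer--Vietoris sequences (and a $\limone$-argument for infinite-dimensional $\sX$) reduces the statement to showing that $\mu_{\mathrm{IP},\Gamma,n}$ is an isomorphism when $\sX=\Gamma/K$ for an arbitrary finite subgroup $K\le\Gamma$.

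\emph{Reduction to a finite group, then scaleability.} For $\sX=\Gamma/K$ with $K$ finite I would use the induction--restriction adjunction $\Gamma\times_K(-)\dashv\mathrm{res}_K$ for $\Gamma$-sheaves --- the analogue of \cref{lem:inducction.sheaves} --- together with the fact that restricting the $\Gamma$-sheaves of \cref{defn:equivariant.coarse} along $K\hookrightarrow\Gamma$ recovers the $K$-equivariant sheaves attached to $\mathrm{res}_K\bE_{\Gamma}$, to obtain natural isomorphisms $\rIP_{n,\loc}^\Gamma(\bE_{\Gamma}\midbar\Gamma/K)\cong\rIP_{n,\loc}^K(\mathrm{res}_K\bE_{\Gamma}\midbar\pt)$ and $\rIP_{n}^\Gamma(\bE_{\Gamma}\midbar\Gamma/K)\cong\rIP_{n}^K(\mathrm{res}_K\bE_{\Gamma}\midbar\pt)$ intertwining $\mu_{\mathrm{IP},\Gamma,n}$ with $\mu_{\mathrm{IP},K,n}$; here $\mathrm{res}_K\bE_{\Gamma}$ is $\bR^m$ with $K$ acting through $K\subset\Gamma\subset\bR^m\rtimes O(m)$, an object of $\ECW_{K,\mathrm{p}}$. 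Now a finite group acting by affine isometries on $\bR^m$ fixes the centroid $|K|^{-1}\sum_{g\in K}g\bp$ of any orbit, and radial scaling toward that point is $K$-equivariant and (as in \cref{exmp:Lipschitz.homotopy}(1)) extends to a $K$-equivariant linearly proper Lipschitz coarse homotopy to the identity; hence $\mathrm{res}_K\bE_{\Gamma}$ is $K$-equivariantly scaleable. Then \cref{thm:equivariant.stable.homotopy.IP}(4) gives that $\mathtt{ev}_1\colon\sIP_{\loc}^K(\mathrm{res}_K\bE_{\Gamma})\to\sIP^K(\mathrm{res}_K\bE_{\Gamma})$ is a $K$-weak equivalence, so $\mu_{\mathrm{IP},K,n}$ is an isomorphism and the reduction is complete.

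\emph{Expected main obstacle.} The step that needs genuine care is the restriction compatibility invoked at the start of the second reduction: the $\Gamma$-equivariant theory is defined through the $\Gamma/N$-realization formalism, so one must verify that forgetting down to a finite subgroup $K$ reproduces exactly the $K$-equivariant theory attached to $\mathrm{res}_K\bE_{\Gamma}$, compatibly with $\mathtt{ev}_1$. I expect this to follow from the colimit-of-$\Gamma$-sheaves description in \cref{defn:equivariant.coarse} together with the observation that $K\cap N=\{e\}$ (since $N$ is torsion-free), so that $K$ embeds into $\Gamma/N$ and --- invoking the covering equivalence of \cref{thm:covering} where needed --- the restriction of $\pst\sIP_{\loc}^\Gamma(\bE_{\Gamma})$ to $K$ coincides with $\pst\sIP_{\loc}^K(\mathrm{res}_K\bE_{\Gamma})$ on the nose. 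Everything else is formal from the $\Omega$-spectrum structure and the scaleability theorem already in hand.
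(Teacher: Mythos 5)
Your proof is correct and follows essentially the same route as the paper: reduce by Mayer--Vietoris to orbits $\Gamma/K$ with $K$ finite, compare $\mu_{\mathrm{IP},\Gamma,n}$ with $\mu_{\mathrm{IP},K,n}$ via the induction isomorphism of \cref{lem:inducction.sheaves}, and invoke equivariant scaleability of $\bE_{\Gamma}$ under the finite group $K$ together with \cref{thm:equivariant.stable.homotopy.IP}(4). The paper states the reduction for $\sX=\Gamma\times_K\sD$ with $\sD$ a $K$-manifold, which lets it handle cells $\Gamma/K\times D^j$ in one step rather than separating orbit from cell, but this is only a cosmetic difference. Your write-up is actually more explicit than the paper's at the crucial step: the paper's commutative square simply decorates the top arrow $\mu_{\mathrm{IP},K,n}$ with ``$\cong$'' and never says why, whereas you correctly supply the reason — a finite group of affine isometries of $\bR^m$ has a fixed point (the centroid of any orbit), radial scaling toward it is $K$-equivariant and linearly proper Lipschitz, so $\mathrm{res}_K\bE_{\Gamma}$ is $K$-equivariantly scaleable and \cref{thm:equivariant.stable.homotopy.IP}(4) applies. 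Your ``main obstacle'' paragraph, pointing out that $K\cap N=\{e\}$ (as $N\cong\bZ^m$ is torsion-free) so that $K$ injects into $\Gamma/N$ and the $\Gamma/N$-realization formalism restricts cleanly, is a genuine point of care that the paper glosses over with ``makes sense since the sheaves are diffeological spaces''; your observation is the right justification for it, and the appeal to \cref{thm:covering} is not actually needed for this.
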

\begin{proof}
    If $\sM = \Gamma \times_K \sD$ for a finite subgroup $K$ of $\Gamma$ and a $K$-manifold $\sD$, then the desired isomorphism follows from 
    \begin{align*}
    \xymatrix{
     \rIP_{n,\loc}^K(\bE_{\Gamma}  \midbar  \sD) \ar[r]^{\mu_{\mathrm{IP},K, n} }_{\cong} \ar[d]_{\cong }^{\mathrm{ind}_K^\Gamma} &  \rIP_{n}^K(\bE_{\Gamma} \midbar  \sD) \ar[d]_{\cong}^{\mathrm{ind}_K^\Gamma} \\ 
     \rIP_{n,\loc}^\Gamma(\bE_{\Gamma} \midbar  \Gamma \times_K \sD) \ar[r]^{\mu_{\mathrm{IP},\Gamma, n} }  &  \rIP_{n}^\Gamma(\bE_{\Gamma}  \midbar  \Gamma \times_K \sD ).
    }
    \end{align*}
    Here, $\mathrm{ind}_K^\Gamma$ is the induction morphism defined in \cref{lem:inducction.sheaves}, which makes sense since the sheaves $\sIP_{*}(X)$ and $\sIP_{\loc,*}(X)$ are indeed diffeological spaces. 
    For general $\sX$, the isomorphism follows from a standard Mayer--Vietoris argument. 
\end{proof}

\begin{thm}\label{thm:BCI}
Let $\Gamma$ be a discrete group acting properly on $\bR^d$. 
Then the equivariant coarse assembly map $\mu_{\rIP , \Gamma, d} \colon \rIP_{\loc ,d}^\Gamma \to \rIP_d ^\Gamma$ is split injective. 
\end{thm}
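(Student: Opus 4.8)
The plan is to deduce split injectivity by a descent (``forget-control'') argument in the spirit of \cite{roeIndexTheoryCoarse1996}*{Theorem 8.4}: the equivariant coarse assembly map becomes an \emph{isomorphism} once the cohomological coefficient is pulled back to the universal proper $\Gamma$-space $\bE_{\Gamma}=E(\Gamma,\cF_{\fin})$ (cf.\ \cref{rmk:classifying.space}), where it is already known to be an isomorphism by \cref{lem:BC.proper}. Here the substantive input is precisely \cref{lem:BC.proper}, the IP-theoretic analogue of the coarse Baum--Connes isomorphism, which in turn rests on the equivariant scaleability statement \cref{thm:equivariant.stable.homotopy.IP}\,(4) (i.e.\ \cref{thm:scaleable} for $\bE_{\Gamma}=\bR^{m}$), the induction isomorphism \eqref{eqn:induction}, and a Mayer--Vietoris induction over cells; Theorem \ref{thm:BCI} itself will then be a formal diagram chase.

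Concretely, write $\mu:=\mu_{\rIP,\Gamma,d}\colon \rIP_{\loc,d}^{\Gamma}(\bE_{\Gamma}\midbar\pt)\to \rIP_{d}^{\Gamma}(\bE_{\Gamma}\midbar\pt)$ for the assembly map with one-point cohomological coefficient; by \cref{defn:assembly.map} it is induced by the morphism of $\Gamma$-sheaves $\mathtt{ev}_{1}$ and is therefore natural in the cohomological variable. Let $c\colon \bE_{\Gamma}\to\pt$ be the ($\Gamma$-equivariant) collapse map. First I would record the resulting commuting square
\[
\xymatrix{
\rIP_{\loc,d}^{\Gamma}(\bE_{\Gamma}\midbar\pt) \ar[r]^{\mu} \ar[d]_{c^{*}} & \rIP_{d}^{\Gamma}(\bE_{\Gamma}\midbar\pt) \ar[d]^{c^{*}} \\
\rIP_{\loc,d}^{\Gamma}(\bE_{\Gamma}\midbar\bE_{\Gamma}) \ar[r]^{\cong} & \rIP_{d}^{\Gamma}(\bE_{\Gamma}\midbar\bE_{\Gamma}),
}
\]
whose bottom arrow is the assembly map of \cref{lem:BC.proper} applied to the \emph{proper} $\Gamma$-CW-complex $\sX=\bE_{\Gamma}$, hence an isomorphism. (Note that \cref{lem:BC.proper} does not apply to the top row, since the one-point space with trivial $\Gamma$-action is not a proper $\Gamma$-CW-complex; this is why $\mu$ is only split injective, not an isomorphism.) Next I would invoke the second part of \cref{lem:proper.to.FC}: since $\bE_{\Gamma}^{K}$ is contractible for every finite $K\le\Gamma$, the map $c$ restricts to a $K$-homotopy equivalence for each such $K$, so the left vertical $c^{*}$ is an isomorphism. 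Consequently $c^{*}\circ\mu$ is a composite of two isomorphisms, hence an isomorphism, and
\[
r:=(c^{*}\circ\mu)^{-1}\circ c^{*}\colon \rIP_{d}^{\Gamma}(\bE_{\Gamma}\midbar\pt)\longrightarrow \rIP_{\loc,d}^{\Gamma}(\bE_{\Gamma}\midbar\pt)
\]
satisfies $r\circ\mu=\id$, exhibiting $\mu$ as split injective and proving \cref{thm:BCI}. The same square with $\pt$ replaced by an arbitrary cohomological coefficient (and, via the $\Gamma/N$-sheaf $\Hom_{N}(\sM,\rIP_{\loc,d})$ of \cref{defn:spatial.equivariant.IP.cohomology}, at the level of the bivariant functors) shows more generally that $\rIP_{\loc,d}^{\Gamma}$ is a natural direct summand of $\rIP_{d}^{\Gamma}$.

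Since \cref{lem:BC.proper} and \cref{lem:proper.to.FC} are already established in the text, there is no genuine obstacle in this argument; the only points requiring care are (i) checking that the left vertical $c^{*}$ is an isomorphism, i.e.\ correctly identifying $\bE_{\Gamma}$ with $E(\Gamma,\cF_{\fin})$ so that \cref{lem:proper.to.FC} applies, and (ii) the bookkeeping that ensures the ``coarse Baum--Connes'' input \cref{lem:BC.proper} is applied to the bottom row, where the coefficient $\bE_{\Gamma}$ is proper, rather than to the top. All of the real analytic work — the scaleability homotopy of \cref{thm:scaleable} together with the induction isomorphism — is hidden inside \cref{lem:BC.proper}; if one wanted the retraction realized as a map of $\Omega$-spectra rather than just of $\pi_{d}$-groups, one would phrase $c^{*}$ at the level of the $\Gamma/N$-equivariant realizations of \cref{defn:spatial.equivariant.IP.cohomology}, leaving the content unchanged.
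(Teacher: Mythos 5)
Your proposal is correct and takes essentially the same route as the paper: the paper's proof is exactly the commuting square you write down (with $\sM=\pt$ replaced by a general cohomological variable $\sM$, and $c^*$ written as $\pr_{\sM}^*$), citing \cref{lem:proper.to.FC} for the left vertical isomorphism and \cref{lem:BC.proper} for the bottom horizontal one, and then noting the retraction $r=(c^*\circ\mu)^{-1}\circ c^*$. Your explanatory remark that \cref{lem:BC.proper} does not apply to the top row because $\pt$ is not a proper $\Gamma$-space (hence split injectivity only) is a helpful clarification that the paper leaves implicit.
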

\begin{proof}
By \cref{lem:proper.to.FC,lem:BC.proper}, the theorem follows from the commutativity of the diagram
\begin{align*} 
    \xymatrix{
    \rIP^\Gamma_{\loc ,n}(\bE_{\Gamma}  \midbar \sM) \ar[r]^{\mu_{\rIP,\Gamma,n}} \ar[d]^{\cong }_{\pr_{\sM}^*} & \rIP_n^\Gamma (\bE_{\Gamma}  \midbar \sM) \ar[d]^{\pr_{\sM}^*} \\
    \rIP^\Gamma_{\loc ,n}(\bE_{\Gamma} \midbar \bE_{\Gamma} \times \sM) \ar[r]^{\mu_{\rIP, \Gamma, n} }_{\cong} & \rIP_n^\Gamma(\bE_{\Gamma}  \midbar \bE_{\Gamma} \times \sM),
    }
\end{align*}
which follows from the naturality of $\mu_{\rIP,\Gamma,n}$. 
\end{proof}

This shows that translation invariant invertible quantum spin systems has a weak topological phase. This statement is already proved by Jappens \cite{jappensSPTIndicesEmerging2024} for $d \leq 2$.
\begin{cor}\label{cor:translation.BCI}
Let $\Gamma \cong \bZ^d$ be the group of lattice translations. 
Then the group $\pi_{0}(\mathrm{IP} ^\Gamma(\bE_{\Gamma}))$ is isomorphic to the direct sum of $\binom{d}{k}$ copies of the group $\pi_{d-k}(\IP)$. 
\end{cor}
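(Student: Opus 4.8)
The plan is to specialize the results of \cref{section:assembly} to $\Gamma\cong\bZ^{d}$ and then reduce everything to an elementary stable‑homotopy computation on the torus. Since $\bZ^{d}$ is torsion‑free, its translation subgroup $N$ is all of $\Gamma$, so $\bE_{\Gamma}=\bR^{d}$, the quotient $\Gamma/N$ is trivial, and $\bE_{\Gamma}/\Gamma=\bR^{d}/\bZ^{d}$ is the $d$‑torus $T^{d}=(S^{1})^{\times d}$, which is also $B\bZ^{d}$. By the covering isomorphism (\cref{thm:covering}) the equivariant $\Omega$‑spectrum of IG localization flows on $\bE_{\Gamma}$ agrees with the non‑equivariant one on $T^{d}$, so $\rIP_{\loc,n}^{\Gamma}(\bE_{\Gamma})\cong\rIP_{\loc,n}(T^{d})$, and by \cref{thm:main2} this is the IP‑homology group $\rIP_{n}(T^{d})=\pi_{n}^{\mathrm{st}}(T^{d}_{+}\wedge\IP)$. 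By the split injectivity of the equivariant coarse assembly map (\cref{thm:BCI}), $\rIP_{\loc,n}^{\Gamma}(\bE_{\Gamma})$ is a direct summand of $\rIP_{n}^{\Gamma}(\bE_{\Gamma})$.

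The first thing to check is that this direct summand is everything, i.e.\ that the assembly map $\mu_{\rIP,\Gamma}$ is an isomorphism for $\Gamma=\bZ^{d}$; this is the IP‑cohomology incarnation of the Baum--Connes isomorphism for $\bZ^{d}$ (equivalently of the coarse Baum--Connes isomorphism for $\bR^{d}$). In fact it is already visible in the proof of \cref{thm:BCI}: the one‑sided inverse constructed there is assembled from \cref{lem:BC.proper} and the projection $\pr_{\bE_{\Gamma}}^{*}$, and when $\Gamma$ is torsion‑free the parameter space $\bE_{\Gamma}=\bR^{d}$ carries only the trivial $\Gamma/N$‑action, hence is contractible, so the non‑localized $\pr_{\bE_{\Gamma}}^{*}$ occurring there is itself an isomorphism and the one‑sided inverse upgrades to a two‑sided one. (Alternatively one invokes the identification of $\mu_{\rIP,\Gamma}$ with the Davis--L\"{u}ck assembly map together with the classical fact that Baum--Connes holds for the amenable group $\bZ^{d}$.) Consequently $\pi_{0}(\rIP^{\Gamma}(\bE_{\Gamma}))\cong\rIP_{d}(T^{d})$, the index $d$ being fixed by $\dim\bE_{\Gamma}=d$.

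It then remains to compute $\rIP_{d}(T^{d})=\pi_{d}^{\mathrm{st}}(T^{d}_{+}\wedge\IP)$ via the stable splitting of the torus. From $\Sigma^{\infty}S^{1}_{+}\simeq\mathbb{S}\vee\Sigma\mathbb{S}$ and $T^{d}_{+}\simeq(S^{1}_{+})^{\wedge d}$ one obtains $\Sigma^{\infty}T^{d}_{+}\simeq\bigvee_{k=0}^{d}\binom{d}{k}\Sigma^{k}\mathbb{S}$, hence $T^{d}_{+}\wedge\IP\simeq\bigvee_{k=0}^{d}\binom{d}{k}\Sigma^{k}\IP$ and $\pi_{d}^{\mathrm{st}}(T^{d}_{+}\wedge\IP)\cong\bigoplus_{k=0}^{d}\binom{d}{k}\pi_{d-k}(\IP)$, which is the asserted decomposition. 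Since $\IP$ is $2$‑truncated with $\pi_{0}(\IP)=\pi_{1}(\IP)=0$ and $\pi_{2}(\IP)\cong\bZ$, this group is concretely $\bZ^{\binom{d}{2}}$, recovering \cite{jappensSPTIndicesEmerging2024} for $d\le 2$.

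The stable splitting of the torus and the ensuing homology computation are routine. The two points that need attention are (i) upgrading the split injectivity of the assembly map to an isomorphism, for which the contractibility of the parameter space $\bR^{d}$ for torsion‑free $\Gamma$ is the clean argument, and (ii) keeping the homological degree straight so that the answer appears with the shift $\pi_{d-k}(\IP)$ rather than a negatively shifted variant. Neither is a serious obstacle, so the corollary is in the end a bookkeeping consequence of \cref{thm:covering}, \cref{thm:BCI} and \cref{thm:main2} together with the torus splitting.
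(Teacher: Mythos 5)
Your reduction chain matches the paper's: the covering equivalence $\sIP_{\loc}^\Gamma(\bE_\Gamma)\simeq\sIP_{\loc}(\bT^d)$ from \cref{thm:covering}, the identification of $\rIP_{\loc,*}$ with $\rIP_*$ via \cref{thm:main2}, and a routine torus computation (the paper invokes the Eilenberg--Steenrod axioms of \cref{thm:IP.bivariant}, you invoke the stable splitting of $\bT^d$; these give the same answer). That part is fine.

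The serious problem is the middle paragraph, where you try to upgrade the split injectivity of \cref{thm:BCI} to an isomorphism. The paper does not do this, and in fact it explicitly states in the final paragraph of \cref{section:assembly} that bijectivity of the equivariant coarse assembly map is an \emph{open conjecture}; its own proof of \cref{cor:translation.BCI} only computes "the direct summand $\rIP_{\loc,n}^\Gamma(\bE_\Gamma)$''. Your argument for the upgrade does not work. You claim that because $\bE_\Gamma=\bR^d$ is contractible and $\Gamma/N$ is trivial for torsion-free $\Gamma$, the right vertical map $\pr_\sM^*\colon\rIP_n^\Gamma(\bE_\Gamma\midbar\sM)\to\rIP_n^\Gamma(\bE_\Gamma\midbar\bE_\Gamma\times\sM)$ in the diagram of \cref{thm:BCI} is an isomorphism. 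But this map compares genuine $\Gamma$-fixed points (at $\sM=\pt$, the left side is $\pi_0$ of the $\Gamma$-invariant Hamiltonians) with $\Gamma$-homotopy fixed points (the right side involves $\Gamma$-equivariant maps out of $\bE_\Gamma = E\Gamma$), and $\bE_\Gamma\times\sM\to\sM$ is \emph{not} a $\Gamma$-homotopy equivalence since $\Gamma$ acts freely on $\bE_\Gamma$ while $\sM$ has a fixed point. Contractibility of the underlying space is irrelevant. What makes the \emph{localized} vertical map an isomorphism is \cref{lem:proper.to.FC}, whose proof reduces to finite isotropy via the induction isomorphism of \cref{prp:additivity}; that additivity holds only for $\sIP_{\loc}$ (the coarse geometry of a disjoint union is not reflected in a product decomposition of $\sIP$), so the argument does not transfer to the non-localized functor. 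Your alternative route — "invoke Baum--Connes for the amenable group $\bZ^d$'' — also fails: the Baum--Connes theorem is a statement about the operator $K$-theory assembly map, and while the equivariant coarse assembly map here is formally a Davis--L\"uck assembly map, bijectivity of a Davis--L\"uck assembly map is a spectrum-specific property and is not inherited from the $C^*$-algebraic $K$-theory case.

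So the proposal should simply state, as the paper does, that $\pi_0(\IP^\Gamma(\bE_\Gamma))$ \emph{contains a direct summand} isomorphic to $\bigoplus_k\binom{d}{k}\pi_{d-k}(\IP)$; the upgrade to a full isomorphism is exactly the content of the conjecture that the section ends with, not a consequence of the results established.
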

\begin{proof}
    The direct summand $\rIP_{\loc, n}^{\Gamma} (\bE_{\Gamma}) $ is isomorphic to $\rIP_{\loc ,n}(\bT^d)$, which can be calculated by the Eilenberg--Steenrod axiom proved in \cref{thm:IP.bivariant}. 
\end{proof}
The same statement holds for fermionic and on-site $(G,\phi)$-symmetric phases. 

\begin{cor}\label{cor:spectral.sequence}
    There is a spectral sequence that eventually converges to the direct summand $\IP_{\loc}^\Gamma(\bE_{\Gamma} \mid \pt)$ of $\IP^\Gamma(\bE_{\Gamma} \mid \pt)$ whose $E_2$-page is given by $E_2^{pq} \cong \mathrm{H}^p(\bE_{\Gamma}/\Gamma, \mathcal{IP}_{q}^\Gamma)$. Here, $\mathcal{IP}_q$ is the sheafification of the presheaf $U \mapsto \IP_{\loc}^\Gamma(\pr^{-1}(U))$ on $\bE_{\Gamma}/\Gamma$, where $\pr \colon \bE_{\Gamma} \to \bE_{\Gamma}/\Gamma$ is the projection. 
\end{cor}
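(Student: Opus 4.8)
The plan is to obtain the desired spectral sequence as the Atiyah--Hirzebruch type spectral sequence associated with the skeletal filtration of $\bE_{\Gamma}$, applied to the equivariant homology theory $\rIP_{\loc}^\Gamma$ of \cref{thm:equivariant.stable.homotopy.IP}~(3), and then to reinterpret its $E_2$-page in terms of the quotient orbifold $\bE_{\Gamma}/\Gamma$. First, by \cref{thm:BCI} the assembly map $\mu_{\rIP,\Gamma,\bullet}$ is split injective, so $\IP_{\loc}^\Gamma(\bE_{\Gamma}\midbar \pt)$ is canonically a direct summand of $\IP^\Gamma(\bE_{\Gamma}\midbar \pt)$ and it suffices to build a spectral sequence converging to $\rIP_{\loc,\bullet}^\Gamma(\bE_{\Gamma}\midbar \pt)$. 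Since $\Gamma$ is crystallographic, $\bE_{\Gamma}=\bR^m$ admits a $\Gamma$-invariant triangulation, hence a cocompact proper $\Gamma$-CW structure whose skeletal filtration $\bE_{\Gamma}^{(0)}\subset\cdots\subset \bE_{\Gamma}^{(m)}=\bE_{\Gamma}$ descends to a finite CW structure on $\bE_{\Gamma}/\Gamma$; after barycentric subdivision we may assume this structure is adapted to the orbit-type stratification, so that the isotropy group $K_\sigma$ is constant on each open cell $\sigma$. Feeding this filtration into the long exact sequences provided by \cref{thm:equivariant.stable.homotopy.IP}~(3) yields an exact couple, hence a spectral sequence in the standard way (cf.\ \cite{maunderSpectralSequenceExtraordinary1963}) converging to $\rIP_{\loc,\bullet}^\Gamma(\bE_{\Gamma}\midbar \pt)$; convergence holds because the filtration has finite length $m$.

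Next I would compute the $E_1$-page. By the equivariant excision and additivity contained in \cref{thm:equivariant.stable.homotopy.IP}~(3) (cf.\ \cref{prp:excision,prp:additivity}), the relative group $\rIP_{\loc}^\Gamma(\bE_{\Gamma}^{(p)},\bE_{\Gamma}^{(p-1)})$ splits as a product over the $\Gamma$-orbits of open $p$-cells; one such orbit has the form $\Gamma\times_{K_\sigma}D$ with $D$ a $K_\sigma$-equivariant $p$-disk rel boundary, and the induction isomorphism \eqref{eqn:induction} together with homotopy invariance (\cref{thm:equivariant.stable.homotopy.IP}~(2)) identifies its contribution with $\rIP_{\loc,q}^{K_\sigma}(\pt)$ up to the appropriate degree shift. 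Thus $E_1^{pq}\cong\bigoplus_{\sigma}\rIP_{\loc,q}^{K_\sigma}(\pt)$, the sum running over the $p$-cells of $\bE_{\Gamma}/\Gamma$. On the other hand, for $U$ a small $K$-invariant cone neighborhood of a lift of a point $\bar{\bm{x}}\in\bE_{\Gamma}/\Gamma$ with isotropy $K$ one has $\pr^{-1}(U)\cong\Gamma\times_K U$, and $U$ contracts $K$-equivariantly to the point, so \eqref{eqn:induction} and homotopy invariance give $\rIP_{\loc,q}^\Gamma(\pr^{-1}(U))\cong\rIP_{\loc,q}^K(\pt)$; hence $\mathcal{IP}_q^\Gamma$ is constructible with this stalk, locally constant along each stratum. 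Consequently the $E_1$-complex $(E_1^{\bullet q},d_1)$ is precisely the cellular/\v{C}ech complex computing the (co)homology of $\bE_{\Gamma}/\Gamma$ with coefficients in $\mathcal{IP}_q^\Gamma$ — here one checks that the $d_1$-differential, assembled from the connecting maps of the pairs $(\bE_{\Gamma}^{(p)},\bE_{\Gamma}^{(p-1)})$, agrees with the \v{C}ech differential of the coefficient sheaf — which gives $E_2^{pq}\cong\mathrm{H}^p(\bE_{\Gamma}/\Gamma\,;\mathcal{IP}_q^\Gamma)$, as in the equivariant Atiyah--Hirzebruch spectral sequences used in \cites{shiozakiGeneralizedHomologyAtiyah2023,leeConnectionFreefermionInteracting2024,leeCrystallineequivalentTopologicalPhases2024}.

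The main obstacle is the identification of the $E_2$-page with sheaf cohomology of the quotient: one must verify that the descended cell structure on $\bE_{\Gamma}/\Gamma$ genuinely is a CW structure, that $\mathcal{IP}_q^\Gamma$ is locally constant on each orbit-type stratum (so that its cellular complex computes $\mathrm{H}^\ast$), and — most delicately — that the $d_1$-differential coming from the geometry of the $\Gamma$-CW structure of $\bE_{\Gamma}$ matches the coboundary operator of the coefficient sheaf (equivalently, that the covariant assignment $U\mapsto\rIP_{\loc}^\Gamma(\pr^{-1}(U))$ and the sheaf $\mathcal{IP}_q^\Gamma$ are Poincaré--Lefschetz dual over the compact orbifold $\bE_{\Gamma}/\Gamma$, which is what converts the homological skeletal indexing into the cohomological indexing of the statement). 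An alternative route, which sidesteps the orbifold bookkeeping, is to first apply the covering isomorphism \cref{thm:covering} to replace $\rIP_{\loc}^\Gamma(\bE_{\Gamma})$ by $\rIP_{\loc}^{\Gamma/N}(\bE_{\Gamma}/N)$, thereby reducing the problem to the Bredon Atiyah--Hirzebruch spectral sequence for the finite group $\Gamma/N$ acting on the torus $\bT^m$, a completely standard construction whose $E_2$-term is again the sheaf cohomology of $\bT^m/(\Gamma/N)=\bE_{\Gamma}/\Gamma$ with the stated coefficients.
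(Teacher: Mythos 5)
The paper states this corollary without an explicit proof, so there is no author's argument to compare against; your skeletal-filtration construction (and the alternative via \cref{thm:covering} plus a Bredon AHSS for $\Gamma/N$ on $\bT^m$) is the natural strategy and is consistent with the references cited just before the statement.

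The genuine gap is one you flag but never close: the skeletal filtration of $\bE_\Gamma$ produces a \emph{homologically} indexed exact couple, so the resulting $E_1$-page is a chain complex $\bigoplus_\sigma \rIP_{\loc,q}^{K_\sigma}(\pt)$ whose homology computes something of the shape $\mathrm{H}_p(\bE_\Gamma/\Gamma;\mathcal{IP}_q^\Gamma)$, not the sheaf \emph{cohomology} $\mathrm{H}^p(\bE_\Gamma/\Gamma;\mathcal{IP}_q^\Gamma)$ asserted in the statement. Converting one to the other requires a Poincaré--Lefschetz (or Verdier) duality on the compact orbifold $\bE_\Gamma/\Gamma$, with a degree shift by $\dim\bE_\Gamma$ and an orientation twist absorbed into the coefficient sheaf; equivalently, one can run the AHSS on the cohomology side of the duality in \cref{thm:Atiyah.duality} (as in \cref{cor:pg}) and unwind the Thom isomorphism for the Spanier--Whitehead dual of $\bE_\Gamma/\Gamma$. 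You name this as the ``main obstacle'' and list both outs, but neither is carried out, and since this identification \emph{is} the content of the corollary, the proof as written is incomplete precisely there. Your second route via \cref{thm:covering} does not sidestep the issue either: the Bredon AHSS for $\Gamma/N\curvearrowright\bT^m$ still yields a homologically indexed page. A related loose thread is that, because $\rIP_{\loc}^\Gamma$ is covariant in the metric-space variable, $U\mapsto\rIP_{\loc}^\Gamma(\pr^{-1}(U))$ is naturally a \emph{pre-cosheaf}, not a presheaf; the step that turns it into a sheaf whose cohomology one can take is again the duality you did not verify. Your computation of the stalks $\rIP_{\loc,q}^{\Gamma_{\bm{x}}}(\pt)$, the constructibility of $\mathcal{IP}_q^\Gamma$ along orbit-type strata, and the use of \eqref{eqn:induction} and excision to split $E_1$ over orbits of cells are all correct and form the right skeleton; the missing piece is the duality and degree bookkeeping that produces the cohomological $E_2$-page actually claimed.
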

Note that the germ of the above sheaf $\mathcal{IP}^\Gamma_q$ at $p \in \bE_{\Gamma}/\Gamma$ is $\IP^{\Gamma_p}(\{ p \} )$. In particular, if the $\Gamma$-action on $\bE_{\Gamma}$ is free, then the sheaf $\mathcal{IP}^\Gamma_q$ is indeed a local system. 

\begin{cor}\label{cor:pg}
Let $\Gamma$ be the crystallographic group of type \textsf{pg}, i.e., the fundamental group of the Klein bottle $\mathrm{Kl}$. Then the homotopy group $\rIP_n^\Gamma (\bE_{\Gamma}) = \pi_n(\sIP^\Gamma(\bE_{\Gamma}))$ contains a direct summand isomorphic to $\bZ$ if $n=2$, and $\bZ \oplus \bZ/2$ if $n=3$, respectively.
\end{cor}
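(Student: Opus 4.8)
The plan is to use \cref{thm:main3} to replace the equivariant group by the IP-homology of a Klein bottle, and then compute the latter with the Atiyah--Hirzebruch spectral sequence. First I would record that the crystallographic group $\Gamma$ of type $\mathsf{pg}$ is torsion-free, so that by the Bieberbach theorem \cite{charlapBieberbachGroupsFlat1986} its translation subgroup $N \cong \bZ^2$ has index $2$ and $\bE_{\Gamma}/\Gamma$ is the Klein bottle $\mathrm{Kl}$; giving $\mathrm{Kl}$ the flat metric induced from $\bE_{\Gamma}$, under which it is bi-Lipschitz embeddable in a Euclidean space, places it in $\ECW_{\fin}$. Then \cref{thm:main3}, i.e.\ \cref{thm:covering,thm:BCI}, applies: $\rIP_n^\Gamma(\bE_{\Gamma}) \cong \pi_n(\sIP^\Gamma(\bE_{\Gamma}))$ contains $\rIP_{\loc,n}(\mathrm{Kl}) \cong \rIP_n(\mathrm{Kl}) = \pi_n^{\mathrm{st}}(\mathrm{Kl}_+ \wedge \IP)$ as a direct summand. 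It thus suffices to prove $\rIP_2(\mathrm{Kl}) \cong \bZ$ and $\rIP_3(\mathrm{Kl}) \cong \bZ \oplus \bZ/2$.

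Next I would feed two inputs into the homological Atiyah--Hirzebruch spectral sequence $E^2_{p,q} = \mathrm{H}_p(\mathrm{Kl}\,; \pi_q(\IP)) \Rightarrow \rIP_{p+q}(\mathrm{Kl})$. On the spectrum side, \cref{subsection:homotopy.higher} gives $\pi_2(\IP) \cong \bZ$ (via $\IP_0 \simeq K(\bZ,2)$, the Berry curvature class), $\pi_0(\IP) = \pi_1(\IP) = 0$, and $\pi_q(\IP) = 0$ for $q \geq 3$ (the $2$-truncation). On the space side, $\mathrm{H}_0(\mathrm{Kl}\,;\bZ) \cong \bZ$, $\mathrm{H}_1(\mathrm{Kl}\,;\bZ) \cong \bZ \oplus \bZ/2$, and $\mathrm{H}_p(\mathrm{Kl}\,;\bZ) = 0$ for $p \geq 2$. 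Since $\mathrm{Kl}$ is $2$-dimensional, the only entries $E^2_{p,q}$ with $p + q \in \{2,3\}$ that can be nonzero are those with $p \in \{0,1,2\}$ and $q \geq 0$; all the corresponding homotopy groups of $\IP$ are known, so in particular the computation does not touch the still-undetermined negative groups such as $\pi_0(\IP_2)$. Working it out, in total degree $2$ only $E^2_{0,2} \cong \bZ$ is nonzero, and in total degree $3$ only $E^2_{1,2} \cong \bZ \oplus \bZ/2$ is nonzero.

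Finally I would check that all differentials touching these entries vanish, so that $E^2 = E^\infty$ in these degrees and there is no extension ambiguity. Indeed every differential into or out of $E_{0,2}$ or $E_{1,2}$ has a source or target of the form $E_{p,q}$ with $p < 0$, or $p \geq 3$, or $q \in \{0,1\}$, all of which vanish. Hence $\rIP_2(\mathrm{Kl}) \cong \bZ$ and $\rIP_3(\mathrm{Kl}) \cong \bZ \oplus \bZ/2$, which completes the proof via the direct-summand statement of \cref{thm:main3}. The step that most deserves care is precisely this bookkeeping: the assertion relies on the two total degrees $2$ and $3$ never seeing the unknown negative homotopy groups of $\IP$, which is exactly what the $2$-dimensionality of the Klein bottle guarantees; everything else is routine.
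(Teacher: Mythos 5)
Your proposal is correct, but it reaches the answer by a genuinely different computation than the paper once the reduction to $\rIP_n(\mathrm{Kl})$ is made. Both you and the paper first apply \cref{thm:main3} (i.e.\ \cref{thm:covering,thm:BCI}) together with the torsion-freeness of $\Gamma$ to extract the direct summand $\rIP_{\loc,n}(\mathrm{Kl}) \cong \rIP_n(\mathrm{Kl})$. From that point the paper argues geometrically: it embeds $\mathrm{Kl}$ in $\bR^5$ so that the normal bundle is $L \oplus \underline{\bR}^2$, identifies the Spanier--Whitehead dual $\sD_{\mathrm{Kl}} \cong \Sigma^2\mathrm{Th}(L)$ via \cref{thm:Atiyah.duality}, and then uses the identifications $\IP_0 \simeq K(\bZ,2)$, $\IP_1 \simeq K(\bZ,3)$ to reduce to a single ordinary cohomology group $\mathrm{H}^{5-n}(\mathrm{Th}(L)\,;\bZ) \cong \mathrm{H}_{n-2}(\mathrm{Kl}\,;\bZ)$. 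You instead run the homological Atiyah--Hirzebruch spectral sequence $E^2_{p,q}=\mathrm{H}_p(\mathrm{Kl}\,;\pi_q(\IP)) \Rightarrow \rIP_{p+q}(\mathrm{Kl})$ directly, and the combination of $\dim \mathrm{Kl}=2$ (so $p\le 2$) with $\pi_0(\IP)=\pi_1(\IP)=0$, $\pi_2(\IP)\cong\bZ$, $\pi_{\ge 3}(\IP)=0$ forces collapse at $E^2$ in total degrees $2$ and $3$ with a single non-vanishing entry and no extension problems. Your bookkeeping observation --- that for $p+q\in\{2,3\}$ and $p\le 2$ one always has $q\ge 0$, so the unknown negative homotopy groups of $\IP$ never enter --- is exactly the role played by the degree shift $\mathrm{H}^{5-n}(\mathrm{Th}(L))$ in the paper's argument, so both proofs exploit the same structural facts (the $2$-truncation of $\IP$ and $\dim\mathrm{Kl}=2$); your spectral-sequence route has the advantage of avoiding the choice of an explicit embedding and the identification of the normal bundle, at the cost of invoking convergence of the AHSS, which is unproblematic here since $\mathrm{Kl}$ is a finite complex.
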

\begin{proof}
    Let us embed the Klein bottle $\mathrm{Kl}$ into $\bR^5$ in the way that the normal bundle $\nu \mathrm{Kl}$ is isomorphic to $L \oplus \underline{\bR}^2$, where $L \to \mathrm{Kl}$ is the real line bundle corresponding to the double cover $\bT^2 \to \mathrm{Kl}$. For example, the embedding
    \[
        [x,y] \mapsto (\cos (2\pi x), \sin (2 \pi x), \cos (2 \pi y), \cos(\pi x)\sin (2 \pi y) , \sin(\pi x) \sin (2\pi y))
    \]
    satisfies this condition. Then the Spanier--Whitehead dual $\sD_{\mathrm{Kl}}$ is $\mathrm{Th}(\nu \mathrm{Kl}) \cong \Sigma^2 \mathrm{Th}(L)$.  
    By \cref{thm:Atiyah.duality,thm:covering,thm:BCI}, for $n \geq 2$, the group $\rIP_{n}^\Gamma(\bE_\Gamma)$ contains a direct summand 
    \begin{align*}
        \rIP_{\loc,n}^\Gamma(\bE_\Gamma) \cong {}&{} \rIP_{\loc}(\mathrm{Kl} \midbar S^n,\pt) \cong \rIP_{\loc}^5(\Sigma^n \sD_{\mathrm{Kl}}) \cong \rIP_{\loc}^{3-n}(\mathrm{Th}(L) ) \\
        \cong {}&{}  \mathrm{H}^{5-n}(\mathrm{Th}(L)\,; \bZ) \cong \mathrm{H}^{7-n}(\sD_{\mathrm{Kl}}\,; \bZ) \cong \mathrm{H}_{n-2}(\mathrm{Kl} ,; \bZ) .
    \end{align*}
    The right hand side is isomorphic to $\bZ$ if $n=2$, and $\bZ \oplus \bZ/2$ if $n=3$. 
\end{proof}

\subsection{Comparison with the Davis--L\"{u}ck assembly map}
In equivariant algebraic topology, a general definition of the assembly map is known for $\Omega$-spectra having the symmetry of a discrete group $\Gamma$ (more precisely, a covariant $\Or(\Gamma)$-spectrum).
This Davis--L\"{u}ck assembly map was introduced in \cite{davisSpacesCategoryAssembly1998}. 
In this subsection, we compare our equivariant coarse assembly map with the Davis--L\"{u}ck assembly map by using the comparison method given in \cite{davisSpacesCategoryAssembly1998}*{Section 6}.

As in the previous subsections, let $\Gamma$ be a crystallographic group acting on the Euclidean space $\bE_\Gamma \cong \bR^m$. 
For any subgroup $H \leq \Gamma$, the subgroup $N_H \coloneqq N \cap H$ is a finite index normal subgroup of $H$. The finite group $H/N_H$ acts on the quotient affine space $\bE_{\Gamma} / (N_H \otimes_{\mathbb{Z}}\mathbb{R}) $ isometrically, and hence has a unique fixed point. We write $\bE_H$ for the inverse image of this fixed point, an affine subspace of $\bE_\Gamma$ globally preserved by the $H$-action. Moreover, since $\bE_H \cong N_H \otimes_{\bZ}\bR$, the $H$-action on $\bE_H$ is proper and cocompact. 
\begin{defn}
We define the covariant functor $\mathbf{IP}^\Gamma \colon \Or(\Gamma) \to \Sp$ to the category $\Sp$ of $\Omega$-spectra in the following way. 
\begin{enumerate}
    \item For a subgroup $H \leq \Gamma$, let 
    \begin{align*}
        \mathbf{IP}^\Gamma(\Gamma/H) \coloneqq \IP^H_*(\mathbb{E}_H).
    \end{align*}
    \item For $g \in (\Gamma/K)^H \cong \Hom(\Gamma/H,\Gamma/K)$, we define $g_* \colon \mathbf{IP}^\Gamma(\Gamma/H) \to \mathbf{IP}^\Gamma(\Gamma/K)$ by
    \begin{align*}
        g_* \colon \IP_*^H(\bE_{H}) \xrightarrow{\beta_g} \IP_*^{gHg^{-1}}(g\bE_H) \cong \bigg( \prod_{[h] \in K/gHg^{-1}} \IP_*(hg\bE_{H}) \bigg)^K \to \IP_*^K(\bE_K).
    \end{align*}
\end{enumerate}
\end{defn}

In \cite{davisSpacesCategoryAssembly1998}, a covariant functor $\mathbf{E} \colon \Or(\Gamma) \to \Sp$ is called an $\Or(\Gamma)$-spectrum. 
Such a functor has an extension $\mathbf{E}_{\%} \colon \CW_\Gamma \to \Sp$ defined by
\begin{align*}
    \mathbf{E}_{\%}(X) \coloneqq \Hom_\Gamma(\blank , X)_+ \otimes _{\Or(\Gamma)} \mathbf{E} = \bigsqcup_{H \leq \Gamma} X^H_+ \wedge \mathbf{E}(\Gamma/H) /\sim,  
\end{align*}
where the equivalence relation $\sim$ is given by $(g^*x,y) \sim (x,gy)$ for any $g \colon \Gamma/H \to \Gamma/K$. 
In short, this tensor product $\blank \otimes_{\Or(\Gamma)} \blank $ is the coend of the contravariant functor $\Hom_\Gamma(\blank,X)$ and the covariant functor $\mathbf{E}$. The Davis--L\"{u}ck assembly map is the morphism
\begin{align*}
    \mu_{\mathbf{E}} \coloneqq \mathbf{E}_{\%}(\pr) \colon \mathbf{E}_{\%}(E(\Gamma,\mathcal{F}_{\mathrm{fin}})) \to \mathbf{E}_{\%}(\pt) \cong \mathbf{E}(\pt),
\end{align*}
where $E(\Gamma ,\cF_{\fin}) = \bE_{\Gamma}$ as is noted in \cref{rmk:classifying.space}. 

Applying this machinery to $\mathbf{IP}^\Gamma$, we get the Davis--L\"{u}ck assembly map 
\begin{align} 
    \mu_{\mathbf{IP}^\Gamma}^{\mathrm{DL}} \colon \mathbf{IP}_{\%}^\Gamma (E(\Gamma,\cF_{\fin})) \to \mathbf{IP}_{\%}^\Gamma(\pt) = \mathbf{IP}^\Gamma(\Gamma/\Gamma) = \IP^\Gamma_*(\bE_{\Gamma}).
    \label{eqn:assembly}
\end{align}
We prove that this morphism is identified with \cref{defn:assembly.map}. 

\begin{defn}
    Let $Z, X \in \ECW^\Gamma$ and let $\pi \colon X \to Z$ be a $\Gamma$-equivariant Lipschitz continuous map. We define the set $\pst\fL_{\fR,X,\Gamma }^{\piloc}$ and the sheaves $\pst\sGP_{\piloc }(X)$, $\pst\sIP_{\piloc }(X)$, and $\pst\bsIP_{\piloc }(X)$ in the same way as \cref{defn:equivariant.coarse}, by using the transversally rescaled metric 
    \[
    \rmd_{\pi,s}(\bm{x}, \bm{y} ) \coloneqq \rmd_{X \times \bR^l}(\bm{x},\bm{y}) + s \cdot \rmd_{Z \times \bR^l} (\pi(\bm{x}), \pi(\bm{y})). 
    \]
\end{defn}
By the same reason as \cref{thm:equivariant.stable.homotopy.IP} (1), the family $\{ \pst\sIP_{\piloc ,d}(X), \kappa_{d}^{\piloc}\}$ forms a naive $\Gamma$-equivariant $\Omega$-spectrum of sheaves.

\begin{defn}
    Let $X \in \ECW^\Gamma$. 
    Let $\pi \colon X \times \bE_{\Gamma} \to X$ denote the projection. We define the covariant functor $\mathbf{IP}_{\piloc}^\Gamma \colon \CW_\Gamma \to \Sp$ by
    \begin{align*}
        \mathbf{IP}_{\piloc}^\Gamma(X) \coloneqq \hocolim_{Z \leq X \times \bE_{\Gamma}} \IP_{\piloc}^\Gamma(Z), 
    \end{align*}
    where $Z$ runs over all finite subcomplexes of $X \times \bE_{\Gamma}$ equipped with the subspace metric. 
\end{defn}

A $\Gamma$-equivariant Lipschitz continuous map $F \colon X \to Y$ induces 
\begin{align*}
    F_* \coloneqq \hocolim_{Z \leq X \times \bE_{\Gamma}} \Big( F_* \colon \IP_{\piloc}^\Gamma(Z) \to \IP_{\piloc}^\Gamma(W) \Big) \colon \mathbf{IP}_{\piloc}^\Gamma(X) \to \mathbf{IP}_{\piloc}^\Gamma(Y).  
\end{align*}

\begin{lem}\label{lem:assembly.DL}
    The following hold. 
    \begin{enumerate}
        \item If two $\Gamma$-equivariant Lipschitz continuous maps $F_0, F_1 \colon X \to Y$ are connected by a $\Gamma$-equivariant large-scale Lipschitz homotopy that is Lipschitz continuous, then $F_{0,*}$ and $F_{1,*}$ are homotopic. 
        \item If $X = \Gamma/H$, then $\mathbf{IP}_{\piloc}^\Gamma(\Gamma/H)$ is weakly equivalent to $\IP_{\loc}^H(\bE_H)$.
        \item If $X$ is a proper finite $\Gamma$-CW-complex, then $\mathbf{IP}_{\piloc}^\Gamma(X)$ is weakly equivalent to $\IP_{\loc}^\Gamma(X)$. 
    \end{enumerate}
\end{lem}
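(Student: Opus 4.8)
\textbf{Proof proposal for Lemma~\ref{lem:assembly.DL}.}
The plan is to prove the three statements in order, since (1) is the technical backbone that makes the \hocolim definition homotopy-invariant, (2) computes the value on orbits, and (3) bootstraps from (2) to general proper finite $\Gamma$-CW-complexes by a Mayer--Vietoris induction over cells. For part (1), I would argue exactly as in the proof of \cref{cor:localizing.coarse.homotopy} and \cref{thm:equivariant.stable.homotopy.IP}~(2): given a $\Gamma$-equivariant large-scale Lipschitz homotopy $\widetilde{F} \colon \mathbf{I}_\varphi X \to Y$ that is Lipschitz continuous, one uses the truncated (localized, $\pi$-localized) Kitaev pump $\kappa_{X}^{\piloc,R}$ to build, for each finite subcomplex $Z \leq X \times \bE_\Gamma$, a smooth path of IG $\pi$-localization flows interpolating $F_{0,*}$ and $F_{1,*}$ on $Z$, compatibly with the restriction maps defining the $\hocolim$. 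The only subtlety, already flagged in the proof of \cref{thm:equivariant.stable.homotopy.IP}, is that one cannot truncate between two $\Gamma$-invariant subspaces; the remedy is the same flip homotopy $\widetilde{\flip}$ of \cref{lem:flip.homotopy} applied to the observable algebras placed at $\widetilde{F}'(\bm{x},n)$ and $F_1'(\bm{x},n)$, whose almost local norms stay controlled because the relevant Hamiltonian decays away from $\mathbf{I}_{2\varphi+r}\Lambda \setminus \mathbf{I}_\varphi \Lambda$.

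For part (2), when $X = \Gamma/H$ we have $X \times \bE_\Gamma \cong \Gamma \times_H \bE_\Gamma$, and I would first use the $\Gamma$-equivariant retraction $\bE_\Gamma \to \bE_H$ (coming from the orthogonal decomposition $\bE_\Gamma \cong \bE_H \oplus \bE_H^\perp$ and the fact that $H/N_H$ has a unique fixed point in the complementary factor) together with part (1) to see that $\hocolim_{Z}\IP_{\piloc}^\Gamma(Z)$ is weakly equivalent to the analogous colimit over finite subcomplexes of $\Gamma \times_H \bE_H$. Then, by the induction isomorphism \eqref{eqn:induction} (equivalently \cref{lem:inducction.sheaves}), each $\IP_{\piloc}^\Gamma(\Gamma \times_H Z_0)$ is naturally identified with $\IP_{\piloc}^H(Z_0)$ for $Z_0 \leq \bE_H$; and since the second factor of $X \times \bE_\Gamma$ equals $\bE_\Gamma$ itself, the $\pi$-localization parameter is precisely the coarse (localization) parameter in the $\bE_H$-direction, so passing to the colimit over $Z_0 \leq \bE_H$ gives $\IP_{\loc}^H(\bE_H)$. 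A small point requiring care: the box-space rescaling in the definition of $\sIP_{\loc}$ must be matched with the transversal rescaling $\rmd_{\pi,s}$ in $\sIP_{\piloc}$, which is arranged exactly because the target of $\pi$ is a single $\bE_\Gamma$-factor and the metrics differ only by a quasi-isometry uniformly in $s$.

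For part (3), I would run an iterated Mayer--Vietoris argument over the skeleta of $X$, exactly parallel to the proof of \cref{lem:covering.cell} and \cref{lem:BC.proper}. The functor $X \mapsto \mathbf{IP}_{\piloc}^\Gamma(X)$ is excisive and has long exact sequences for $\Gamma$-CW-pairs (this follows from \cref{prp:excision} and \cref{lem:IP.homology.longexact} applied levelwise to the $\pi$-localized sheaves, since the $\pi$-localized Kitaev pump behaves formally like the ordinary one), and likewise for $X \mapsto \IP_{\loc}^\Gamma(X)$. On a single cell $\Gamma \times_K D^n$ the comparison morphism $\mathbf{IP}_{\piloc}^\Gamma \to \IP_{\loc}^\Gamma$ is a weak equivalence by part (2) together with the additivity isomorphism \cref{prp:additivity} (whose equivariant version one uses as in \cref{lem:covering.cell}) and the homotopy invariance of both sides; then the five lemma propagates the equivalence up the skeleta of the finite complex $X$.

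\textbf{Expected main obstacle.} The hard part will be part (2): carefully matching the two rescalings---the transversal rescaling $\rmd_{\pi,s}$ defining $\sIP_{\piloc}$ along $\pi\colon X\times\bE_\Gamma\to X$ versus the box-space rescaling defining $\sIP_{\loc}(\bE_H)$---uniformly in the scale parameter $s$, so that the colimit over finite subcomplexes $Z\leq X\times\bE_\Gamma$ really produces the localization $C^k$-norms of \eqref{eqn:localizing.norm} on $\bE_H$ and not merely a quasi-isometric perturbation of them. This is where one must invoke \cref{lem:equi.polynomial.growth.inherit.2} and \cref{exmp:lattice.Rl} to control brick growth rates equi-uniformly in $s$, and where the proper cocompactness of the $H$-action on $\bE_H$ (the Bieberbach structure) is genuinely used.
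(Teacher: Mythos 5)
Your part (1) matches the paper's approach. But there are two genuine problems in the remaining parts.

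In part (2) you misidentify the direction of the transversal rescaling, and this matters. The projection in the definition of $\mathbf{IP}_{\piloc}^\Gamma$ is $\pi\colon X\times\bE_\Gamma\to X$, so the rescaled metric $\rmd_{\pi,s}=\rmd+s\cdot\rmd_X(\pi(\cdot),\pi(\cdot))$ amplifies distances in the \emph{$X$-direction}, not the $\bE_\Gamma$-direction. For $X=\Gamma/H$ this means that on a single $H$-slice $\{eH\}\times N_r(\bE_H)$ the map $\pi$ is \emph{constant}, so the $\piloc$-rescaling is trivial on the slice; the induction isomorphism therefore produces $\IP^H(N_r(\bE_H))$ with \emph{no} localization parameter, and the colimit over $r$ gives $\IP^H(\bE_H)$, which is exactly what the paper's proof writes. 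Your claim that ``the $\pi$-localization parameter is precisely the coarse (localization) parameter in the $\bE_H$-direction'' and that ``the target of $\pi$ is a single $\bE_\Gamma$-factor'' are both incorrect. The identification with $\IP_{\loc}^H(\bE_H)$ in the lemma's statement does not come from matching rescalings: it holds because, in the only case in which the lemma is later applied (orbits with \emph{finite} stabilizers), $\bE_H=\pt$ and $\IP_{\loc}^H(\pt)\simeq\IP^H(\pt)$ by \cref{thm:equivariant.stable.homotopy.IP}~(4). So the ``expected main obstacle'' you flag (matching the box-space rescaling of $\sIP_{\loc}$ with $\rmd_{\pi,s}$) does not actually arise here.

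For part (3) the paper does not run a Mayer--Vietoris induction over skeleta. Instead it chooses a Lipschitz continuous $\Gamma$-map $F\colon X\to\bE_\Gamma$ (\cref{rmk:equivariant.universal}), observes that the neighbourhoods $X_n=N_n((\id\times F)(X))$ of the graph of $F$ form a cofinal family of finite $\Gamma$-CW-subcomplexes of $X\times\bE_\Gamma$, that $\pi\colon X_n\to X$ together with the section $\id\times F$ is a Lipschitz deformation retraction, and that on the graph of $F$ the $\piloc$-rescaling coincides with the $\loc$-rescaling (since $\pi$ restricted to the graph is bi-Lipschitz onto $X$). Homotopy invariance (\cref{thm:equivariant.stable.homotopy.IP}~(2)) then finishes the proof in one stroke, with no cell-by-cell induction. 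Your Mayer--Vietoris route is not supported by what has been established: \cref{prp:excision} and \cref{lem:IP.homology.longexact} are proved for $\sIP_{\loc}$, not for the $\pi$-localized sheaves, so it is not immediate that $\mathbf{IP}_{\piloc}^\Gamma$ satisfies long exact sequences for $\Gamma$-CW-pairs; the equivariant version of \cref{prp:additivity} that you invoke requires uniform separation of the pieces, which a generic collection of cells in a finite complex need not satisfy; and you never define the comparison morphism $\mathbf{IP}_{\piloc}^\Gamma\to\IP_{\loc}^\Gamma$ that the five lemma would require. (Indeed, the paper's post-lemma discussion \emph{deduces} $\cF_{\fin}$-excisiveness from items (1)--(2) of the lemma together with \cref{thm:equivariant.stable.homotopy.IP}~(2); it does not have excision available as an input to the proof of (3).)
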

\begin{proof}
    The claim (1) is verified in the same way as \cref{cor:localizing.coarse.homotopy}. The claim (2) follows from the weak equivalence
    \[
    \mathbf{IP}^{\Gamma}_{\piloc}(\Gamma/H) =\hocolim_{r \to \infty} \mathbf{IP}_{\piloc}^\Gamma(\Gamma \times _H N_r(\bE_H)) \simeq\hocolim_{r \to \infty} \IP^H (N_r(\bE_H)) \simeq\IP^H (\bE_H). 
    \]
    Here, the first weak equivalence is the induction $\mathrm{ind}_H^\Gamma$ given in \eqref{eqn:induction}, while the second weak equivalence comes from the fact that the inclusion $\bE_H \to N_r(\bE_H)$ is an $H$-equivariant large-scale Lipschitz equivalence.
    
    To see (3), take a Lipschitz continuous $\Gamma$-equivariant map $F\colon X \to  \bE_{\Gamma}$ by \cref{rmk:equivariant.universal}. Then, the direct product $X \times \bE_{\Gamma}$ is covered by an increasing sequence of $\Gamma$-CW-subcomplexes $X_n$ such that $N_r(\id \times F(X))$. 
    Now, the projection $\pi \colon X_n \to X$ is a Lipschitz continuous deformation retract of $\id \times F \colon X \to X_n$ (cf.\ \cref{rmk:embed.CWfin.EucCW}). 
    Hence the inclusion of the fixed point spectra $\IP_{\loc}^\Gamma(X) \to \IP_{\loc}^\Gamma(X_n)$ is a weak equivalence. 
    Since $\pst \bsIP_{\piloc}^\Gamma(X) = \pst \bsIP_{\loc}^\Gamma(X)$, this finishes the proof. 
\end{proof}

Recall \cref{rmk:classifying.space}, in which we state that $\Gamma$-$\cF_{\fin}$-CW-complex is the same as proper $\Gamma$-CW-complex.
\cref{lem:assembly.DL} (1), (2), together with \cref{thm:equivariant.stable.homotopy.IP} (2),  means that the functor $\mathbf{IP}_{\piloc}^\Gamma$ is $\Gamma$-homotopy invariant and $\cF_{\fin}$-excisive in the sense of \cite{davisSpacesCategoryAssembly1998}*{page 243}. 

In \cite{davisSpacesCategoryAssembly1998}*{Theorem 6.3}, Davis--L\"uck constructed another covariant functor $\mathbf{IP}_{\piloc}^{\Gamma,\%} \colon \CW_\Gamma \to \Sp$, which is explicitly defined by 
\begin{align*}
    \mathbf{IP}_{\piloc}^{\Gamma,\%}(X) \coloneqq \Hom_\Gamma(\blank,X)  \otimes_{\Or (\Gamma) \times \Delta}   B^{\mathrm{bar}} \big(? \downarrow (\Or (\Gamma) \times \Delta) \downarrow  ?? \big)  \otimes_{\Or (\Gamma) \times \Delta}  \mathbf{IP}_{\loc}^\Gamma(\blank ) 
\end{align*}
by using the notations of \cite{davisSpacesCategoryAssembly1998}*{Sections 1 and 3}, and the natural transforms 
\[
    \bA \colon \mathbf{IP}_{\piloc}^{\Gamma,\%} \to \mathbf{IP}_{\piloc}^\Gamma, \quad \bB \colon \mathbf{IP}_{\piloc}^{\Gamma,\%} \to (\mathbf{IP}_{\piloc}^\Gamma|_{\Or (\Gamma)})_{\%},
\]
such that $\bB_X$ is a weak equivalence for any $X \in \ECW_{\Gamma,\mathrm{fp}}$ and $\bA_{\Gamma/H}$ are weak equivalences for any $\Gamma/H \in \Or (\Gamma)$. 
The domain of our functor $\mathbf{IP}_{\piloc}^{\Gamma}$ is smaller than the whole category $\CW_{\Gamma}$ of $\Gamma$-CW-complexes, but as can be seen from the proof of \cite{davisSpacesCategoryAssembly1998}*{Theorem 6.3}, this does not affect the proof of these weak equivalences. 
These data give the commutative diagram 
\begin{align*}
    \xymatrix@C=8ex{
    \mathbf{IP}_{\piloc}^\Gamma(E(\Gamma,\cF_{\fin})) \ar[d]^{\pr_*}  &     \mathbf{IP}_{\piloc}^{\Gamma,\%}(E(\Gamma,\cF_{\fin})) \ar[r]^{\bB_{E(\Gamma,\cF_{\fin})} \hspace{4ex}} \ar[l]_{\bA_{E(\Gamma,\cF_{\fin})}} \ar[d]^{\pr_*} & 
    (\mathbf{IP}_{\piloc}^\Gamma|_{\Or (\Gamma)})_{\%}(E(\Gamma,\cF_{\fin})) \ar[d]^{\pr_*} \\
    \mathbf{IP}_{\piloc}^\Gamma(\pt)   &
    \mathbf{IP}_{\piloc}^{\Gamma,\%}(\pt) \ar[r]^{\bB_\pt } \ar[l]_{\bA_\pt } & 
    (\mathbf{IP}_{\piloc}^\Gamma|_{\Or (\Gamma)})_{\%}(\pt).
    }
\end{align*}
The right vertical map $\pr_*$ is the Davis--L\"uck assembly map. 
Moreover, \cite{davisSpacesCategoryAssembly1998}*{Theorem 6.3} also states that $\bA_X$ is a weak equivalence if there is a class of subgroups $\cF$, that is closed under taking subgroups and conjugations, such that $\mathbf{IP}_{\piloc}^\Gamma$ is $\cF$-excisive and $X$ is a $\Gamma$-$\cF$-CW-complex. 
We apply this fact to $\cF=\cF_{\fin}$ and $X=E(\Gamma, \cF_{\fin})$.

As is already mentioned in \cref{lem:assembly.DL} (3), our functor $\mathbf{IP}_{\piloc}^\Gamma$ is $\cF_{\fin }$-excisive. Moreover, the universal $\Gamma$-$\cF_{\fin}$-space $E(\Gamma, \cF_{\fin})$ is modeled by $\bE_{\Gamma}$ (\cref{rmk:classifying.space}). 
In summary, the above diagram is rewritten as
\begin{align*}
    \xymatrix@C=6ex{
    \mathbf{IP}^\Gamma(\bE_{\Gamma}) \ar[d]^{\mu_{\rIP}}  &     \mathbf{IP}_{\piloc}^{\Gamma,\%}(\bE_{\Gamma}) \ar[r]^{\bB_{\bE_{\Gamma}}}_{\simeq} \ar[l]_{\bA_{\bE_{\Gamma}}}^{\simeq} \ar[d] & 
    \mathbf{IP}_{\%}^\Gamma(\bE_{\Gamma}) \ar[d]^{\mu_{\mathbf{IP}}^{\mathrm{DL}}} \\
    \mathbf{IP}^\Gamma(\pt)   &
    \mathbf{IP}_{\piloc}^{\Gamma,\%}(\pt) \ar[r]^{\bB_{\pt}}_{\simeq} \ar[l]_{\bA_{\pt}}^{\simeq} & 
    \mathbf{IP}^\Gamma(\pt) .
    }
\end{align*}
This gives the desired comparison of the equivariant coarse and Davis--L\"{u}ck assembly maps.

In the end, we conclude the paper with an optimistic conjecture that our new assembly map is also an isomorphism as it is for the successful precedents in topological and algebraic K-theory, the Baum--Connes and the Farrell--Jones conjectures.

\begin{conj}
Let $\Gamma$ be a countable group acting properly and isometrically on the Euclidean space $\bE_{\Gamma}$. Then the assembly map \eqref{eqn:assembly} is a $\Gamma$-weak equivalence. 
\end{conj}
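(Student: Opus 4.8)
The plan is to prove the conjecture by adapting the coarse-geometric proof of the coarse Baum--Connes conjecture for spaces of finite asymptotic dimension, with Yu's localization $C^*$-algebra replaced throughout by the sheaf $\sIP_{\loc}$, and then to transport the result to the equivariant setting by means of the covering isomorphism \cref{thm:covering} and the induction formula \eqref{eqn:induction}.

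\textbf{Step 1 (reductions).} A discrete group acting properly and isometrically on a Euclidean space preserves a minimal invariant affine subspace on which it acts properly and cocompactly, so we may assume $\Gamma$ is crystallographic with torsion-free finite-index normal subgroup $N\cong\bZ^m$. By the identification of $\mu_{\rIP,\Gamma,N}$ with the Davis--L\"uck assembly map (\cref{lem:assembly.DL} and the discussion after it, together with \cref{lem:assembly.DL}\,(3)), the conjecture is equivalent to the assertion that the forgetful morphism $\mathtt{ev}_1\colon \pst\bsIP_{\loc}^{\Gamma}(\bE_{\Gamma})\to\pst\bsIP^{\Gamma}(\bE_{\Gamma})$ is a $\Gamma/N$-weak equivalence, that is, to an equivariant coarse Baum--Connes statement for the proper cocompact $\Gamma$-space $\bE_{\Gamma}$. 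Its split injectivity is already \cref{thm:BCI}, so only the surjectivity of the induced maps on homotopy groups remains; and by \cref{thm:covering} the source is the computable group $\rIP_{\loc,n}^{\Gamma/N}(\bE_{\Gamma}/N)$, so the content is really a statement about the target $\rIP_n^{\Gamma}(\bE_{\Gamma})$.

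\textbf{Step 2 (descent over the asymptotic dimension).} The Euclidean space $\bE_{\Gamma}$ has equivariant asymptotic dimension $m$: for every $R>0$ it carries a $\Gamma$-equivariant uniformly bounded cover that splits into $m+1$ families, each a disjoint union of $\Gamma$-orbits of pairwise $R$-separated uniformly bounded pieces $\Gamma\times_K D$ with $K$ finite. Feeding such an anti-\v{C}ech system into the equivariant Mayer--Vietoris sequence of \cref{thm:equivariant.stable.homotopy.IP}\,(3) (equivalently \cref{cor:mayer-vietoris}) lets one build $\bE_{\Gamma}$, coarsely, out of $m+1$ such layers. On a single layer $\bigsqcup_i \Gamma\times_{K_i}D_i$, the metrically controlled additivity of \cref{prp:additivity} gives $\pst\sIP_{\loc}(\bigsqcup_i\Gamma\times_{K_i}D_i)\simeq\prod_i\pst\sIP_{\loc}(\Gamma\times_{K_i}D_i)$, each factor being $\rIP_{\loc}^{K_i}(D_i)\simeq\rIP_{\loc}^{K_i}(\pt)$ by \eqref{eqn:induction} and homotopy invariance, and on $\pt$ the assembly map is an isomorphism by $K_i$-equivariant scaleability and \cref{thm:equivariant.stable.homotopy.IP}\,(4). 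An induction on $m$ using excision (\cref{prp:excision}) and the five lemma for these Mayer--Vietoris sequences then yields the desired weak equivalence on $\bE_{\Gamma}$. Since $\IP$ is $2$-truncated, one may streamline the bookkeeping by running the argument along the Postnikov tower, comparing the Atiyah--Hirzebruch type spectral sequences of \cref{thm:IP.bivariant} and reducing to Eilenberg--Mac\,Lane coefficients layer by layer.

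\textbf{Step 3 (the main obstacle).} The formal skeleton above is complete once one analytic input is available, and that input is the crux of the matter. The anti-\v{C}ech decomposition produces invertible gapped Hamiltonians and localization flows obtained by cutting and re-gluing along an infinite, coarsely disjoint family of bounded pieces, and one must show that this operation preserves the spectral gap \emph{uniformly} --- uniformly also over all localization parameters $s\in[1,\infty)$. This is the quantum-spin-system analogue of the fact that operators of controlled propagation form a $C^*$-algebra, and it will require a quantitative strengthening of the gap-stability estimates built on Hastings' adiabatic theorem (\cref{thm:automorphic.equivalence}, \cref{prp:Lieb.Robinson.Ck}, \cref{lem:cut.diffused}) applicable to the locally-trivial Hamiltonians arising from the decomposition. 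A secondary difficulty, present precisely when $\Gamma$ has torsion, is the need for a $\Gamma$-equivariant asymptotic-dimension decomposition compatible with the orbit structure --- the ingredient that in operator $K$-theory is supplied by the (known) Baum--Connes conjecture for the crystallographic group $\Gamma$; producing it intrinsically within the IP-cohomology framework, rather than importing it, is the other place where genuinely new work will be required.
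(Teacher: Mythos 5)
This is stated in the paper as an open conjecture with no proof --- the authors offer it as an optimistic extrapolation from the Baum--Connes and Farrell--Jones conjectures --- so there is nothing to compare your argument against. Judging the proposal on its own terms: you honestly present it as a strategy rather than a proof, and the two obstacles you flag in Step~3 are real, but there is a gap in Step~2 that you do not name and that is arguably the crux. The equivariant Mayer--Vietoris of \cref{thm:equivariant.stable.homotopy.IP}(3) and \cref{cor:mayer-vietoris} is a property of the localization side $\sIP_{\loc}$ alone, which is a genuine homology theory in the metric-space variable. The coarse side $\sIP(\blank)$ is \emph{not} a homology theory in that variable; the only cutting arguments the paper runs for it are for coarsely excisive decompositions --- the conical regions in the proof of \cref{cor:spectrum}, or the orbit-wise disjoint unions of \cref{lem:covering.cell} --- not for anti-\v{C}ech covers at a fixed scale $R$. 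For your five-lemma induction to close you would need a controlled Mayer--Vietoris for $\sIP$ in which cutting at scale $R$ produces a comparison with $R$-dependent error, which is precisely what quantitative $K$-theory supplies for Roe algebras. No quantitative IP cohomology exists, and building one is not a ``strengthening of the gap-stability estimates'' but a new theory; so the formal skeleton you describe is not complete even modulo the stated analytic input.

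To see the issue concretely, take $\Gamma = \bZ$ acting by translation on $\bR$. Split injectivity (\cref{thm:BCI}) already gives the direct summand $\rIP_{\loc,n}^\bZ(\bE_\Gamma) \cong \rIP_{\loc,n}(\bT)$; surjectivity would assert that every $\bZ$-invariant IG UAL Hamiltonian on $\bR$ lifts, up to $\bZ$-invariant homotopy, to a $\bZ$-invariant IG localization flow. The $K$-theoretic analogue, Baum--Connes for $\bZ$, is proved via Pimsner--Voiculescu or via $C^*_r\bZ \cong C(\bT)$ and Bott periodicity, both structural features of $C^*$-algebra $K$-theory with no evident counterpart here. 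Neither the equivariant scaleability of \cref{thm:equivariant.stable.homotopy.IP}(4), which requires $\Gamma$ finite, nor the pump-and-interpolate machinery of \cref{section:Kitaev} is $\bZ$-equivariant, because the dilation $\bm{x}\mapsto \bm{x}/2$ is not a $\bZ$-map. Producing a translation-equivariant replacement for the scaling homotopy, an IP-analogue of the Bott element or of the Pimsner--Voiculescu sequence, is the genuinely new input the conjecture is asking for; your Step~3 gestures at this but frames it as an estimate rather than as the missing structural theorem it actually is.
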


\appendix

\section{\texorpdfstring{$G$}{G}-simplicial sets, \texorpdfstring{$G$}{G}-sheaves, and geometric realization}\label{section:Gsheaf}
Let $G$ be a compact Lie group and let $\sF$ be a $G$-sheaf on $\Man$. We give a proof of \cref{thm:equivariant.MW}, stating that there is a topological $G$-space that remembers the $G$-equivariant smooth homotopy theory of the sheaf $\sF$. We first give a constructive proof in \cref{subsection:G.realization} following the line of \cite{madsenStableModuliSpace2007}, and then explain a more abstract homotopy theory behind it in \cref{subsection:sheaf.model,subsection:Gsheaf.category}.  

\subsection{Proof of \texorpdfstring{\cref{thm:equivariant.MW}}{Theorem 3.8}}\label{subsection:G.realization}
Let $\Delta$ denote the simplex category. For closed subgroups $H,K \leq G$, we write $\sHom_G(G/H,G/K)$ for the space of $G$-maps from $G/H$ to $G/K$, which is identified with the manifold $(G/K)^H$. 
\begin{defn}\label{defn:category.DeltaG}
    Let $\Delta_G$ denote the category where
    \begin{itemize}
        \item an object of $\Delta_G$ is a tuple $[n,\bm{H}]$, where $\bm{H}=\{H_i\}_{i=0}^n$ is a decreasing sequence of closed subgroups $H_0 \geq H_1\geq \cdots \geq H_n$ of $G$, and
        \item a morphism from $[n,\bm{H}]$ to $[m,\bm{K}]$ is given by a pair $(\iota,f)$, where $\iota \colon [n] \to [m]$ is a map of $\Delta$ and $f_i \colon \mathop{\mathrm{int}} \Delta_{i} \to \sHom_G (G/H_{i} , G/K_{\iota(i)})$ is a smooth map  such that 
        \begin{align*} 
        f \colon \Delta_{n}(G,\bm{H}) \to \Delta_p(G, \bm{K}), \quad f(t,gH_i) = f_i(t,gH_i) \text{ if $t \in \Delta_i$}
        \end{align*} 
        becomes a continuous map. 
    \end{itemize}
    Here, $\Delta_n(G,\bm{H})$ denote Illman's $G$-equivariant simplex \cite{illmanEquivariantTriangulationTheorem1983} defined by
    \begin{align*} 
    \Delta_n(G,\bm{H}) \coloneqq (\Delta_n \times G) /\sim , \quad \text{$(p,g_1) \sim (p,g_2)$ if $p \in \Delta_i$ and $g_1^{-1}g_2 \in H_i$, }
    \end{align*}
    where $\Delta_i$ is regarded as a subcomplex of $\Delta_n$ by the geometric realization of the standard inclusion $\{0 ,\cdots,i\} \subset \{0,\cdots,n\}$ in the category $\Delta$. Note that $\Delta[n,\bm{H}] \coloneqq \Delta_n (G,\bm{H})$ is a covariant functor from $\Delta_G $ to the category $\kTop_G$ of compactly generated Hausdorff $G$-spaces. 
\end{defn}
For a functor $\sfX \colon \Delta_G^{\mathrm{op}} \to \Set$, its geometric realization is a $G$-space defined by the coend 
    \begin{align*}
        |\sfX|_G \coloneqq \int^{\Delta_G} \Delta[n,\bm{H}] \times \sfX[n,\bm{H}].
    \end{align*}
Explicitly, it is the topological space  
\begin{align} 
    |\sfX |_G \coloneqq \bigsqcup_{[n,\bm{H}] \in \Delta_G}  \Delta_n (G,\bm{H}) \times \sfX[n,\bm{H}] \Big/ \sim, \label{eqn:Gsimplicial.realization1}
\end{align}
    where the equivalence relation is generated by 
\begin{align*}
    \Delta_n(G,\bm{H}) \times \sfX[n,\bm{H}] \ni \big( (p,gH_i),f^*\sfx \big)      \sim{}  \big( f(p,gH_i), \sfx \big) \in \Delta_m(G,\bm{K}) \times \sfX[m,\bm{K}]
\end{align*}
for any $(\iota,f) \in \Hom ([n,\bm{H}],[m,\bm{K}])$ and $(p,gK_i) \in \mathop{\mathrm{int}} \Delta_i \times G/K_{\iota(i)}$. 

We apply this construction to the following smooth $G$-singular simplicial set. 
\begin{defn}
    Let $\sF$ be a $G$-sheaf on $\Man$. Its $G$-singular simplicial set is a functor $\Sing_G \sF \colon \Delta_G^{\mathrm{op}} \to \Set$ defined by
    \begin{align*}
    \Sing_G \sF[n,\bm{H}] \coloneqq {}&{} \{ s \in \sF(\Delta_n^e) \mid \text{$s|_{\Delta_i^e} \in \sF(\Delta_i^e)^{H_i}$ for any $0 \leq i \leq n$} \}, \\
      (\Sing_G \sF) (\iota,f) (\sfs) \coloneqq {}& (f_i \times \iota_*)^* (\nu(\sfs)).
    \end{align*}
\end{defn}
The above definition of $(\Sing_G \sF) (\iota,f) $ makes sense since the $H_i$-invariance of $\sfs|_{\Delta_i^e}$ is equivalent to $\nu (\sfs|_{\Delta_i^e}) \in \sF(G/H_i \times \sM) \subset \sF(G \times \sM)$. 

    We write $\PSh(\Delta_G,\Set)$ for the category of presheaves, i.e., contravariant functors, from $\Delta_G$ to $\Set$. 
    The above assignments $\Sing_G \colon \Sh_G(\Man) \to \PSh(\Delta_G,\Set)$ and $|\blank |_G \colon \PSh(\Delta_G, \Set) \to \kTop_G$ are functorial in an obvious way. 

Since the $G$-fixed point set of $\Delta_n(G,\bm{H})$ is $\Delta_i$ such that $H_i=G$, we get the following identification. 
\begin{prp}\label{prp:fixed.point.Gsing}
    The $G$-fixed point set of $|\Sing_G \sF|_G$ is identical to the non-equivariant realization $|\Sing \sF^G|$ of the fixed point subsheaf. 
\end{prp}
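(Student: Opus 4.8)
The strategy is to exhibit a natural continuous $G$-map $j\colon |\Sing \sF^G| \to |\Sing_G\sF|_G$ whose image lies in the $G$-fixed set, and then to prove that $j$ is a homeomorphism onto $|\Sing_G\sF|_G^G$. First I would introduce the functor $\iota\colon \Delta \to \Delta_G$ sending $[n]$ to $[n,(G,\dots,G)]$. It is fully faithful: since $\sHom_G(G/G,G/G)$ is a point, a $\Delta_G$-morphism $\iota[m]\to\iota[n]$ is precisely a $\Delta$-morphism $[m]\to[n]$. Unwinding \cref{defn:category.DeltaG}, the cell $\Delta[n,(G,\dots,G)] = \Delta_n(G,(G,\dots,G))$ is the geometric simplex $\Delta_n$ with \emph{trivial} $G$-action, because all $H_i=G$ forces $(p,g_1)\sim(p,g_2)$ for all $g_1,g_2$. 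And from the definition of $\Sing_G$, the set $\Sing_G\sF[n,(G,\dots,G)] = \{\, s\in\sF(\Delta_n^e) : s|_{\Delta_i^e}\in\sF(\Delta_i^e)^G,\ 0\le i\le n \,\}$ already equals $\sF^G(\Delta_n^e)=\Sing\sF^G[n]$, since the condition at $i=n$ gives $G$-invariance of $s$ and the conditions at smaller $i$ are then automatic. Hence $\Sing_G\sF\circ\iota^{\mathrm{op}} = \Sing\sF^G$ as simplicial sets, and since the coend defining $|\blank|_G$ is functorial in the indexing category, precomposition with $\iota$ produces $j\colon |\Sing\sF^G| = |\Sing_G\sF\circ\iota^{\mathrm{op}}| \to |\Sing_G\sF|_G$; because the source cells carry the trivial action, $j$ factors through $|\Sing_G\sF|_G^G$.

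\textbf{Surjectivity and homeomorphism.} Next I would analyze the $G$-fixed points cell by cell. For an object $[n,\bm{H}]$ with $H_0\ge H_1\ge\cdots\ge H_n$, a point $(p,gH_i)\in\operatorname{int}\Delta_i\times G/H_i$ of $\Delta_n(G,\bm{H})$ is $G$-fixed iff $H_i=G$; as $\bm{H}$ is decreasing this gives $\Delta_n(G,\bm{H})^G=\Delta_k$ (trivial action) with $k=\max\{i: H_i=G\}$ when $H_0=G$, and $\Delta_n(G,\bm{H})^G=\emptyset$ otherwise. Moreover, when $H_0=\cdots=H_k=G$, the face inclusion $\{0,\dots,k\}\hookrightarrow\{0,\dots,n\}$ gives a $\Delta_G$-morphism $\delta\colon \iota[k]\to[n,\bm{H}]$ with $\Delta[\delta]\colon \Delta_k\to\Delta[n,\bm{H}]$ an embedding onto $\Delta_n(G,\bm{H})^G$ and with $\Sing_G\sF[\delta]=\delta^*\colon \Sing_G\sF[n,\bm{H}]\to\Sing\sF^G[k]$. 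The remaining structural input is that $(\blank)^G$ commutes with the skeletal filtration of the $\Delta_G$-coend $|\Sing_G\sF|_G$: that filtration is built by pushouts along closed inclusions of the cells $\Delta[n,\bm{H}]\times(\text{non-degenerate }\Sing_G\sF\text{-simplices})$, and in compactly generated spaces $(\blank)^G$ preserves such pushouts and sequential colimits. Therefore $|\Sing_G\sF|_G^G$ is assembled from the cells $\Delta_n(G,\bm{H})^G\times(\cdots)$; those with $H_0\ne G$ contribute nothing, and each nonempty one is, via the morphism $\delta$ and the identifications defining the coend, the image of the cell $\Delta_k\times\Sing\sF^G[k]$. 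This shows $j$ maps onto $|\Sing_G\sF|_G^G$; full faithfulness of $\iota$ guarantees that no identifications occur among the all-trivial objects beyond those already present in $|\Sing\sF^G|$, giving injectivity, and matching the two cell decompositions upgrades $j$ to a homeomorphism. Naturality in $\sF$ is clear from the construction.

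\textbf{Main obstacle.} The delicate point is the rigorous bookkeeping of the coend identifications — effectively, establishing the equivariant analogue of the unique non-degenerate representative for points of a geometric realization (so that one may read off that a point is $G$-fixed precisely when its simplex coordinate lies in the $\Delta_n(G,\bm{H})^G$ part) — together with the assertion that $(\blank)^G$ commutes with the skeletal filtration of the $\Delta_G$-coend. Both are standard in equivariant homotopy theory (Illman's equivariant simplicial machinery \cite{illmanEquivariantTriangulationTheorem1983}, and the behaviour of fixed-point functors on cell attachments / $G$-CW structures), so I would cite them rather than reprove them and devote the writing effort to the explicit identification of cells through the functors $\iota$ and $\delta$ above.
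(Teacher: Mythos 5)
Your proposal is correct and follows exactly the approach the paper takes; in fact, the paper gives essentially no proof — it simply states the key observation that the $G$-fixed point set of $\Delta_n(G,\bm{H})$ is $\Delta_i$ for the largest $i$ with $H_i=G$ (and is empty if $H_0\neq G$), which is the cell-level fact you verify and from which the cell-by-cell assembly is intended to follow. Your more detailed bookkeeping via the functor $\iota\colon\Delta\to\Delta_G$, the identification $\Sing_G\sF\circ\iota^{\mathrm{op}}=\Sing\sF^G$, and the face maps $\delta\colon\iota[k]\to[n,\bm{H}]$ fleshes this out honestly, and the appeal to commutation of $(\blank)^G$ with the skeletal filtration in compactly generated spaces is the right way to close the gap the paper leaves implicit.
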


\begin{lem}
    The functors $|\blank|_G \colon \PSh(\Delta_G,\Set) \leftrightarrows \kTop_G \colon \Sing_G$ form an adjoint pair. Here, for a $G$-space $Y$, $\Sing_GY$ is the abbreviation of $\Sing_G  C^0(\blank,Y)$.
\end{lem}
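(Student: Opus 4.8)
The plan is to realize the adjunction in the familiar ``nerve--realization'' shape: for every presheaf $\sfX \in \PSh(\Delta_G,\Set)$ and every $G$-space $Y$, to produce a bijection
\[
    \Hom_{\kTop_G}(|\sfX|_G, Y) \;\cong\; \Hom_{\PSh(\Delta_G,\Set)}(\sfX, \Sing_G Y)
\]
natural in $\sfX$ and in $Y$. First I would record the structural facts that make the argument mechanical. The category $\Delta_G$ is small -- its objects are decreasing chains of closed subgroups of the compact Lie group $G$ (of which there is only a set), and each hom-set is a set of pairs (a simplicial operator, a continuous map into a homogeneous space) -- so $\PSh(\Delta_G,\Set)$ is a well-behaved presheaf category; and $\kTop_G$ is cocomplete, being the category of $G$-objects in the cocomplete category $\kTop$, so the coend defining $|\sfX|_G$ exists. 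Next I would observe that $|\blank|_G$ is, by construction, nothing but the left Kan extension along the Yoneda embedding $\Delta_G \hookrightarrow \PSh(\Delta_G,\Set)$ of the covariant functor $\Delta[\blank] \colon \Delta_G \to \kTop_G$, $[n,\bm H] \mapsto \Delta_n(G,\bm H)$: the defining coend $\int^{[n,\bm H]} \Delta[n,\bm H] \times \sfX[n,\bm H]$ is precisely the canonical colimit formula for that Kan extension, once one reads the set $\sfX[n,\bm H]$ as the copower $\coprod_{\sfX[n,\bm H]} \Delta[n,\bm H]$. It is then a general fact that such a functor has a right adjoint, given on objects by $Y \mapsto \bigl([n,\bm H] \mapsto \Hom_{\kTop_G}(\Delta[n,\bm H], Y)\bigr)$, and the displayed bijection reduces to an identification of this right adjoint with $\Sing_G$.

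That identification is the heart of the matter. I would construct a natural isomorphism of presheaves on $\Delta_G$,
\[
    \Hom_{\kTop_G}(\Delta_n(G,\bm H), Y) \;\cong\; \Sing_G Y[n,\bm H] = \{\, s \in C^0(\Delta_n^e, Y) : s|_{\Delta_i^e} \text{ is } H_i\text{-invariant for all } 0 \le i \le n \,\},
\]
as follows. Since Illman's equivariant simplex is $\Delta_n(G,\bm H) = (\Delta_n \times G)/\!\sim$ with $G$ acting on the second factor, a $G$-map out of it is determined by, and freely reconstructed from, its restriction to $\Delta_n \times \{e\}$; the relation ``$(p,g_1)\sim(p,g_2)$ whenever $p \in \Delta_i$ and $g_1^{-1}g_2 \in H_i$'' forces that restriction to be a continuous map $s \colon \Delta_n \to Y$ sending each face $\Delta_i$ into the fixed subspace $Y^{H_i}$, and conversely any such $s$ determines a well-defined $G$-map $(p,g)\mapsto g\cdot s(p)$. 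This is exactly the face-wise fixed-point condition appearing in $\Sing_G Y[n,\bm H]$. One then checks the correspondence is natural in $[n,\bm H]$ against the morphisms $(\iota,f)$ of $\Delta_G$ -- a direct unwinding of the functoriality of $\Sing_G$ (pull-back of $\nu(s)$ along $f_i \times \iota_\ast$) against that of $\Delta[\blank]$ -- and natural in $Y$ essentially by construction. Composing the Kan-extension adjunction with this natural isomorphism gives the asserted adjunction, and naturality of the composite bijection in $\sfX$ and $Y$ is inherited.

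The main obstacle I anticipate is reconciling the two models of the simplex in play: $\Sing_G$ evaluates sheaves on the \emph{extended} affine simplices $\Delta_n^e \cong \bR^n$ (so that its domains are honest manifolds), whereas $|\blank|_G$ is assembled from Illman's \emph{topological} simplices $\Delta_n$. To make the displayed isomorphism literally correct one must bridge $C^0(\Delta_n^e, Y)$ and $C^0(\Delta_n, Y)$ compatibly with the face structure; I would do this either by rebuilding the realization on the extended equivariant simplices $\Delta_n^e(G,\bm H)$ and checking this yields a $G$-homeomorphic space (using that each $\Delta_n \hookrightarrow \Delta_n^e$ is a face-compatible $G$-deformation retract), or by promoting the restriction maps $C^0(\Delta_n^e, Y) \to C^0(\Delta_n, Y)$ to a natural isomorphism of the two candidate singular presheaves once the face-wise fixed-point conditions are imposed. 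A secondary, more tedious point is that a morphism of $\Delta_G$ mixes a simplicial operator $\iota$ with a family of smooth maps into the spaces $(G/K_{\iota(i)})^{H_i}$, so the naturality verification must be carried out cell by cell; this is routine but needs care.
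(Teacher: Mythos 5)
Your argument is correct and takes a slightly more abstract route than the paper, which simply writes down the unit $\eta_{\sfX}[n,\bm H](\sfx)=\id\times\sfx$ and counit $\epsilon_Y((p,gH_i),\sfs)=g\cdot\sfs(p)$ and declares the triangle identities routine. You instead package $|\blank|_G$ as the left Kan extension of $[n,\bm H]\mapsto\Delta_n(G,\bm H)$ along the Yoneda embedding of $\Delta_G$ and then identify the resulting right adjoint $Y\mapsto\Hom_{\kTop_G}(\Delta[\blank],Y)$ with $\Sing_G$ via the correspondence between $G$-maps $\Delta_n(G,\bm H)\to Y$ and maps $\Delta_n\to Y$ sending each face $\Delta_i$ into $Y^{H_i}$. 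Both routes are standard; yours defers the triangle-identity check to the Kan-extension formalism, at the cost of having to verify the right-adjoint identification is natural in $\Delta_G$, which you correctly observe must be checked against the somewhat unusual morphisms $(\iota,f)$.

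One genuine point, which you raise and the paper's terse proof silently glosses over: $\Sing_G Y[n,\bm H]$ is defined via the \emph{extended} simplex, i.e.\ as a subset of $C^0(\Delta_n^e,Y)$, whereas both the equivariant simplices $\Delta_n(G,\bm H)$ in $|\blank|_G$ and the representable-functor description of the putative right adjoint live on the \emph{compact} simplex $\Delta_n$. The restriction map $C^0(\Delta_n^e,Y)\to C^0(\Delta_n,Y)$ is surjective (compose with a retraction) but not injective, so your second suggested patch --- promoting restriction to a natural isomorphism of singular presheaves --- does not work as stated. Your first patch is the correct one: rebuild the realization on extended equivariant simplices $\Delta_n^e(G,\bm H)$, as the paper itself does later in the remark on fibrant/cofibrant replacements when it introduces $|\blank|_{G,\mathrm{sm}}$; alternatively (and this is what the paper's counit formula tacitly assumes), read $\Sing_G$ applied to a $G$-space as using compact simplices, at the price that it is then no longer literally $\Sing_G C^0(\blank,Y)$. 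Either convention yields the adjunction; the two resulting singular objects are only homotopy equivalent, not isomorphic, which is harmless for the rest of the appendix but is worth being explicit about.
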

\begin{proof}
    The adjunction counit $\epsilon_Y \colon |\Sing_G Y|_G \to Y$ is given by the evaluation map
    \begin{align*} 
     \Delta_n(G,\bm{H}) \times \Sing_G Y [n,\bm{H}] \ni ((p,gH_i), \sfs ) \mapsto g \cdot \sfs(p) \in Y.
    \end{align*}
    The adjunction unit $\eta_{\sfX} \colon \sfX \to \Sing_G |\sfX|$ is given by 
    \begin{align*}
    \eta_{\sfX}[n,\bm{H}] (\sfx) = \id \times \sfx  \in \Hom (\Delta_n , \Delta_n(G,\bm{H}) \times \sfX[n,\bm{H}]) \subset \Hom (\Delta_n,|\sfX|).
    \end{align*}
    It is routine to show that they satisfy the unit-counit relation.
\end{proof}

\begin{lem}\label{lem:G.triangulation}
    Let $\sM$ be a smooth $G$-manifold. Then, there exists a continuous $G$-map $\chi_{\sM} \colon \sM \to |\Sing_G \sM|_G$ that is a right inverse of $\epsilon_\sM$. 
\end{lem}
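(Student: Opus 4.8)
The plan is to construct $\chi_{\sM}$ from a $G$-equivariant smooth triangulation of $\sM$, in the spirit of the non-equivariant map $\chi$ appearing in Madsen--Weiss. By Illman's equivariant triangulation theorem \cite{illmanEquivariantTriangulationTheorem1983} (and its smooth refinement), the smooth $G$-manifold $\sM$ admits a $G$-equivariant simplicial complex structure whose simplices are Illman's equivariant simplices $\Delta_n(G,\bm{H})$; more precisely there is a $G$-equivariant homeomorphism $h \colon |L|_G \xrightarrow{\cong} \sM$ for some functor $L \colon \Delta_G^{\mathrm{op}} \to \Set$ (a $G$-simplicial set), realized so that the characteristic maps of the cells are smooth on the interiors. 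Each nondegenerate cell of $L$ of type $[n,\bm{H}]$ then gives a smooth map $\sigma \colon \mathrm{int}\,\Delta_n^e \to \sM$ which is $H_i$-invariant when restricted to $\mathrm{int}\,\Delta_i^e$ (after composing with the quotient $\Delta_n(G,\bm H)\to\sM$). I would like to say that these assemble into an element of $\Sing_G\sM[n,\bm H]$; the subtlety is that a cell of $L$ is a map whose domain is the \emph{open} simplex, so one must either thicken/extend each $\sigma$ to a neighborhood of $\Delta_n^e$ or, more cleanly, first barycentrically subdivide and then use the fact that $C^0(\blank,\sM)$ is a sheaf (so sections defined on the open stars glue). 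Denote by $\hat\sigma \in \Sing_G\sM[n,\bm H]$ the resulting singular $G$-simplex attached to the cell $\sigma$ of $L$.

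The map $\chi_{\sM}$ is then defined cellwise: on the cell $\Delta_n(G,\bm H)\times\{\sigma\}\subset |L|_G$, set
\[
  \chi_{\sM}(h(p,gH_i)) \coloneqq \big[\,(p,gH_i),\ \hat\sigma\,\big] \in |\Sing_G\sM|_G .
\]
Because the $\hat\sigma$ are compatible with the face and degeneracy maps of $L$ (this is exactly the content of $L$ being a functor on $\Delta_G$ together with $\Sing_G$ being functorial), the definitions on the various cells agree on overlaps, so $\chi_{\sM}$ is a well-defined continuous $G$-map. Equivariance is immediate from the equivalence relation in \eqref{eqn:Gsimplicial.realization1}, which builds in the $G$-action on $\Delta_n(G,\bm H)$.

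Finally, to see that $\epsilon_\sM\circ\chi_\sM=\id_\sM$: unwinding the definition of the counit $\epsilon_\sM$ (the evaluation map $[(p,gH_i),\sfs]\mapsto g\cdot\sfs(p)$), we get $\epsilon_\sM(\chi_\sM(h(p,gH_i))) = g\cdot\hat\sigma(p)$, and by construction $\hat\sigma$ was chosen precisely so that $g\cdot\hat\sigma(p) = h(p,gH_i)$, i.e.\ $\hat\sigma$ is (a representative of) the characteristic map of the cell $\sigma$ of the triangulation. Hence $\epsilon_\sM\circ\chi_\sM=\id_\sM$, so $\chi_\sM$ is a right inverse of $\epsilon_\sM$ as claimed.

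\textbf{Main obstacle.} The crux is the passage from "$\sigma$ is a cell of an equivariant smooth triangulation" to "$\hat\sigma$ is a genuine element of $\Sing_G\sM[n,\bm H]=\{s\in C^0(\Delta_n^e,\sM)\mid s|_{\Delta_i^e}\ \text{is}\ H_i\text{-invariant}\}$", i.e.\ extending/regularizing the characteristic map across the boundary in a way that is simultaneously smooth on each face stratum, respects all the face inclusions coherently, and is compatible with the equivariance data $\bm H$. I expect to handle this exactly as in the non-equivariant Madsen--Weiss argument: use a subdivision so that the open stars of vertices form a cover, observe that $\chi$ is forced on each open star, and invoke the sheaf (gluing) property of $C^0(\blank,\sM)$ — the point being that one never actually needs to produce a single globally smooth extension, only a compatible family on an open cover. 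The equivariant bookkeeping (tracking the decreasing sequences $\bm H$ through subdivisions and faces) is routine once the indexing category $\Delta_G$ is set up as in \cref{defn:category.DeltaG}.
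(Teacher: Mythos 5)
Your approach is the same as the paper's: invoke Illman's equivariant triangulation theorem and define $\chi_{\sM}$ cellwise, sending a point to the class of (its barycentric coordinates in, a fixed characteristic map of) the cell containing it. You also correctly isolate the technical core: a cell of an equivariant triangulation furnishes only a map on the closed Illman simplex, while an element of $\Sing_G\sM[n,\bm H]$ must be a map on the whole \emph{extended} simplex $\Delta_n^e \cong \bR^n$, smooth, coherent across faces, and respecting the isotropy filtration $\bm H$.

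However, your proposed resolution of that obstacle does not work. The sheaf property of $C^0(\blank,\sM)$ (or $C^\infty(\blank,\sM)$) allows one to glue a \emph{compatible family of sections over an open cover of a fixed domain} into a single global section. But the object you need, $\hat\sigma\in\Sing_G\sM[n,\bm H]$, is one map $\Delta_n^e\to\sM$ per cell: the characteristic map $c_\sigma$ is defined a priori only on the compact simplex $\Delta_n\subset\Delta_n^e$, and nothing in the sheaf axioms conjures values for $c_\sigma$ on $\Delta_n^e\setminus\Delta_n$. Subdividing $\sM$ and covering it by open stars addresses how the cellwise formulas for $\chi_\sM$ assemble across cells of $\sM$ (which is automatic for compatible continuous maps and never in doubt), not how each $\hat\sigma$ is produced. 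So the claim that ``one never actually needs to produce a single globally smooth extension'' is mistaken: you need exactly one smooth, $\bm H$-compatible extension $c_\sigma^e\colon\Delta_n^e\to\sM$ for each cell, and no amount of subdivision avoids constructing it.

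The paper's proof supplies exactly this missing input, and it is not a routine bookkeeping step. It first upgrades Illman's theorem to a \emph{smooth} $G$-triangulation by passing to a $\fC_G$-filtered smooth triangulation of $\sM/G$ (Goresky, Verona), then smooths the characteristic maps $c_\sigma$ by Whitney approximation inductively over cell dimension so that the isotropy condition $\Stab_G(c_\sigma(p))\supseteq H_i$ on $\Delta_i$ persists, and only then extends each $c_\sigma$ to a smooth $G$-map $c_\sigma^e\colon\Delta_n^e\to\sM$ still satisfying that isotropy condition on $\Delta_i^e$. To complete your proposal you would need to carry out this smoothing-and-extension step (or an equivalent one); the open-star/sheaf-gluing device you suggest is addressing a different (and essentially vacuous) part of the argument.
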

\begin{proof}
    Let $\fC_G$ denote the poset of conjugacy classes of closed subgroups of $G$. 
    For a $G$-manifold $\sM$, the quotient $\sM/G$ has a map $\fc \colon \sM/G \to \fC_G$ given by $\fc(G \cdot m )= \langle \Stab_G(p) \rangle$, which is continuous with respect to the poset topology on $\fC_G$. That is, for any conjugacy class $\langle H\rangle \in \fC_G$, the subset 
    \begin{align*}
    (\sM/G)^{\langle H \rangle } \coloneqq \{ G \cdot p \in \sM/G \mid \langle \Stab_G(p) \rangle \leq \langle H\rangle  \} 
    \end{align*}
    is closed in $\sM/G$. 
    
    A smooth $G$-triangulation of a $G$-manifold $\sM$ in the sense of \cite{illmanEquivariantTriangulationTheorem1983} consists of the data $(u, \sfT)$ and $(\varphi_{\sigma})_\sigma$, where 
    \begin{itemize}
        \item $\sfT$ is an abstract simplicial complex, and $u \colon |\sfT| \to \sM/G$ is the barycentric subdivision of a triangulation $u_0 \colon |\sfT_0| \to \sM /G$ that is $\fC_G$-filtered in the sense of \cite{waasStratifiedHomotopyTheory2021}, i.e., $\sfT_0^{\langle H \rangle} \coloneqq \sfT_0 \cap (\sM/G)^{\langle H \rangle}$ is a closed subcomplex of $\sfT_0$ for any $\langle H \rangle \in \fC_G$, and smooth (i.e., smooth on the interior of each cell), and 
        \item a family $\{\varphi_\sigma\}_{\sigma \in \sfT}$ of continuous $G$-homeomorphisms $\varphi_\sigma \colon \Delta_n(G, \bm{H}) \to \pi^{-1}(\sigma)$ over $u_{\sigma}  \colon \Delta_n \to \pi^{-1}(u(|\sigma| ))$ whose restriction to the interior of each cell is smooth. Here, $n$ is the dimension of $\sigma$ and $H_0 \geq \cdots \geq H_n$ is a decreasing sequence of closed subgroups of $G$ such that $\langle H_i \rangle = \fc(f(p)) $ if $p \in \Delta_i$.
    \end{itemize}
    In \cite{illmanEquivariantTriangulationTheorem1983}, Illman proved that any $G$-manifold has a $G$-triangulation. It is not explicitly stated that such a $G$-triangulation can be chosen to be \emph{smooth}, but it is actually possible.  
    Indeed, the existence of $\fC_G$-filtered triangulation of $\sM/G$ by \cite{goreskyTriangulationStratifiedObjects1978} is proved in the smooth category (see also \cite{veronaTriangulationStratifiedFibre1979}*{Corollary 3.8}). Moreover, each $G$-map $\varphi_\sigma$ is uniquely recovered from its restriction $c_\sigma \coloneqq \varphi_{\sigma}|_{\Delta_n} \colon \Delta_n \to \pi^{-1}(u(|\sigma|))$ with the property that $c_\sigma(p) $ is $H_i$-invariant if $p \in \Delta_i$. 
    This $c_{\sigma}$ can be replaced with a smooth one by applying the Whitney approximation theorem inductively with respect to the dimension of the cells.

    By replacing $\varphi_\sigma$ if necessary, we may extend $c_{\sigma}$ to a smooth $G$-map $c_{\sigma}^e \colon \Delta_n^e \to \sM$ such that $\mathrm{Stab}_G(c_{\sigma}(p)) \supset H_i$. 
    Now, $\chi_\sM \colon \pi^{-1}(u(|\sigma|)) \to |\Sing_G\sM|_G$ is defined by 
    \begin{align*} 
    \chi_{\sM}(p) = ( u_{\sigma}^{-1}(p) , c_{\sigma}^e) \in \Delta_n(G,\bm{H}) \times C^\infty(\Delta_n^e,\sM).
    \end{align*}
    They are clearly assembled to the desired map.
\end{proof}

For a $G$-simplicial set $\mathsf{X}$, we define the quotient space $\sfX/G$ as a simplicial set by
    \begin{align*} 
    (\mathsf{X}/G)[n] \coloneqq \bigg( \bigsqcup_{H_0 \geq \cdots \geq H_n} \mathsf{X}[n,\bm{H}] \bigg)\bigg/ \sim,
    \end{align*}
    where the equivalence relation $\sim$ is generated by morphisms $\sfX [n,\bm{H}]\to \sfX[n,\bm{K}]$ induced by those of $\Delta_G$. 
    Then, the quotient map
    \begin{align*} 
    \Delta_n(G,\bm{H}) \times \sfX[n,\bm{H}] \to \Delta_n \times (\mathsf{X}/G)[n]
    \end{align*} 
    induces the continuous map $q \colon |\mathsf{X}|_G \to |\mathsf{X}/G|$, which descends to a homeomorphism $|\sfX|_G/G \to |\mathsf{X}/G|$. 
    Moreover, the family of subsets 
    \begin{align*} (\sfX/G)^{\langle H\rangle}[n] \coloneqq \bigg( \bigsqcup_{\langle H_0 \rangle \leq \langle H \rangle } \sfX[n,\bm{H}] \bigg)/\sim  \end{align*}
    makes $|\sfX/G|$ a $\fC_G$-filtered simplicial set. 
\begin{lem}\label{lem:covering.simplicial.approximation}
    Let $\sM$ be a $G$-manifold and let $F \colon \sM \to |\mathsf{X}|_G$ be a continuous $G$-map. Then there is a $G$-triangulation of $\sM$ and an another $G$-map $F'$ that is $G$-homotopic to $F$ and the induced map $F' \colon \sM /G \to |\mathsf{X}/G|$ is simplicial.
\end{lem}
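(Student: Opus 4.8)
The plan is to descend everything to the orbit space, invoke a \emph{stratified} (i.e.\ $\fC_G$-filtered) simplicial approximation theorem there, and then lift the resulting simplicial map and the homotopy back up by exactly the mechanism used in the proof of \cref{lem:G.triangulation}. Concretely, let $q\colon |\sfX|_G \to |\sfX/G|$ be the quotient map and $\bar F\colon \sM/G \to |\sfX|_G/G \cong |\sfX/G|$ the map induced by $F$. Because $F$ is $G$-equivariant it preserves isotropy type, so $\bar F$ is a map of $\fC_G$-filtered spaces: it sends the locus $(\sM/G)^{\langle H\rangle}$ into the closed filtered subcomplex $(\sfX/G)^{\langle H\rangle}$ for every conjugacy class $\langle H\rangle \in \fC_G$ (here I use the continuous map $\fc\colon \sM/G \to \fC_G$ from the proof of \cref{lem:G.triangulation}). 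So the problem reduces to: produce a $\fC_G$-filtered smooth triangulation of $\sM/G$ and a filtered simplicial map $\bar F'$ that is homotopic to $\bar F$ through filtered maps.

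First I would fix, as in the proof of \cref{lem:G.triangulation}, a $\fC_G$-filtered smooth triangulation $u_0\colon |\sfT_0| \to \sM/G$ — which exists by \cite{goreskyTriangulationStratifiedObjects1978} (see also \cite{veronaTriangulationStratifiedFibre1979}) — together with the $\fC_G$-filtered simplicial structure on $|\sfX/G|$ coming from the $G$-simplicial set $\sfX$. Consider the open cover of $\sM/G$ by the sets $\bar F^{-1}\bigl(\mathrm{st}(w) \cap (\sfX/G)^{\fc(w)}\bigr)$, indexed by vertices $w$ of $|\sfX/G|$, where $\mathrm{st}(w)$ is the open star and the intersection with the stratum of $w$ is what makes the construction filtered. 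For any $\fC_G$-filtered metric this cover has a Lebesgue number; iterating barycentric subdivision of $\sfT_0$ (which preserves the $\fC_G$-filtration) until the mesh is smaller than that number, the usual vertex-assignment yields a simplicial map $\bar F'\colon |\sfT'| \to |\sfX/G|$ which, by construction, sends each simplex $\sigma$ of $\sfT'$ into a simplex $\tau$ of $|\sfX/G|$ whose stratum dominates that of $\sigma$ in $\fC_G$; hence $\bar F'$ is filtered, and the straight-line homotopy $h_t = (1-t)\bar F + t\bar F'$ (carried out within the target simplices) is a filtered homotopy from $\bar F$ to $\bar F'$.

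Next I would lift. By Illman's theorem \cite{illmanEquivariantTriangulationTheorem1983}, as recalled in the proof of \cref{lem:G.triangulation}, the subdivided filtered triangulation of $\sM/G$ lifts to a smooth $G$-triangulation of $\sM$: for each simplex $\sigma$ of $\sfT'$ there is a $G$-homeomorphism $\varphi_\sigma\colon \Delta_n(G,\bm H) \to \pi^{-1}(|\sigma|)$ over $|\sigma|$, with $\langle H_i\rangle = \fc$ on $\Delta_i$. Because $\bar F'$ is simplicial and filtered, it sends $\sigma$ to a simplex $\tau$ with isotropy data $\bm K$ and order map $\iota$ satisfying $\langle K_{\iota(i)}\rangle \le \langle H_i\rangle$, so there is a cellwise-smooth $G$-map $\Delta_n(G,\bm H) \to \Delta[m,\bm K] \times \sfX[m,\bm K]$ covering $\bar F'|_{|\sigma|}$; these are compatible on faces, and assembling them through the $\varphi_\sigma$ produces a $G$-map $F'\colon \sM \to |\sfX|_G$ whose image under $q$ is simplicial, i.e.\ $F'\colon \sM/G \to |\sfX/G|$ is simplicial by \eqref{eqn:Gsimplicial.realization1}. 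Applying the same recipe to the product $\fC_G$-filtered triangulation of $\sM/G \times [0,1]$ (refining the prisms over $\sfT'$) lifts $h_t$ to a $G$-homotopy from $F$ to $F'$.

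The main obstacle is the filtered simplicial approximation step itself: ordinary simplicial approximation ignores the stratification, and a naive vertex assignment can push a simplex off its stratum, which kills the equivariant lift — the isotropy $\langle H_i\rangle$ of $\sigma$ must be $\le$ that of its image. Ensuring this forces one to work with the \emph{refined} cover by stars-intersected-with-strata and to verify that barycentric subdivision of a $\fC_G$-filtered complex stays $\fC_G$-filtered with the stratum/star intersection pattern behaving well; this is precisely the stratified simplicial approximation input, for which one should cite \cite{waasStratifiedHomotopyTheory2021} (or, since only finite conically stratified complexes occur here, reprove it directly). Everything else — lifting triangulations, and lifting maps and homotopies cell by cell — is a routine repetition of the argument already carried out for \cref{lem:G.triangulation}.
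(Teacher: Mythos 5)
Your overall route is the same as the paper's: descend $F$ to a continuous map of $\fC_G$-filtered spaces $\bar F\colon \sM/G\to |\sfX/G|$, run a stratified simplicial approximation downstairs, and then lift the simplicial map and the homotopy back up equivariantly. The genuine difference is in the lifting step. The paper invokes Palais' covering homotopy theorem (\cite{bredonIntroductionCompactTransformation1972}*{Theorem~II.7.3}) once: it lifts the filtered homotopy $F_0\simeq F_0'$ downstairs to a $G$-homotopy upstairs starting at $F$, and then defines $F'$ as the endpoint, afterwards smoothing and extracting the $G$-triangulation of $\sM$ from $F'$ itself via $\varphi_\sigma=(F'|_{\pi^{-1}(u(|\sigma|))})^{-1}$. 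You instead fix a $G$-triangulation of $\sM$ via Illman first, build $F'$ cell by cell over the $\varphi_\sigma$'s, and lift the homotopy separately by triangulating $\sM/G\times[0,1]$. This is workable but heavier: you then have to verify face compatibility of the cell maps by hand and re-run the lift construction a second time for the homotopy, whereas Palais gives both in one stroke and also automatically produces the $G$-triangulation adapted to $F'$.

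There is one concrete bug in your sketch of the stratified simplicial approximation itself: the sets $\bar F^{-1}\bigl(\mathrm{st}(w)\cap(\sfX/G)^{\fc_{\sfX}(w)}\bigr)$ are \emph{not} open, because $(\sfX/G)^{\fc_{\sfX}(w)}$ is a closed subcomplex, so $\mathrm{st}(w)\cap(\sfX/G)^{\fc_{\sfX}(w)}$ is the intersection of an open star with a closed set, and its $\bar F$-preimage is neither open nor (in general) covered by interiors. The Lebesgue number argument therefore does not apply as stated, and the naive star-refinement can still send a simplex off its stratum. This is precisely the point that makes stratified simplicial approximation nontrivial, and you correctly flag it; the intended fix in the paper is simply to cite \cite{waasStratifiedHomotopyTheory2021}*{Theorem~1.3.44}, which you also mention as an alternative. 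If you drop the direct sketch and just invoke Waas, the rest of your argument goes through; if you want a self-contained version you would need a genuine proof of filtered simplicial approximation (for instance via a relative or stratum-by-stratum induction), not the Lebesgue-number argument.

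Minor notational slip: you write $\fc(w)$ for a vertex $w$ of $|\sfX/G|$, but $\fc$ was defined as a map $\sM/G\to\fC_G$; you mean the isotropy type assigned to $w$ by the $\fC_G$-filtration on $\sfX/G$.
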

\begin{proof}
    Pick a $\fC_G$-filtered triangulation $u \colon |\sfT_0| \to  \sM/G$ and let $\sfT$ be its barycentric subdivision.
    The composition $q \circ F$ factors through $F_0 \colon \sM/G \to |\mathsf{X}/G|$, which is a continuous map over $\fC_G$.
    Thus, by \cite{waasStratifiedHomotopyTheory2021}*{Theorem 1.3.44}, we may replace $F_0$ to a simplicial map $F_0'$ after a multiple barycentric subdivision of $\sfT$ if necessary.  
    This lifts to a $G$-map $F' \colon \sM \to |\mathsf{X}|_G$ by Palais' covering homotopy theorem (\cite{bredonIntroductionCompactTransformation1972}*{Theorem II.7.3}). 
    Finally, by an inductive replacement as in the proof of \cref{lem:G.triangulation}, $F'$ can be chosen so that the restrictions 
    \[
    F' \colon \mathop{\mathrm{int}}\pi^{-1}(u(|\sigma|)) \to \{ F'_0 (\sigma)\} \times \mathop{\mathrm{int}}\Delta[n,\bm{H}] \subset |\sfX|_G
    \]
    are smooth for each $\sigma \in \sfT$. Therefore, $\varphi_{\sigma} \coloneqq (F'|_{\pi^{-1}(u(|\sigma|))})^{-1}$ gives a smooth $G$-triangulation of $\sM$ such that $F'$ is $G$-simplicial. 
\end{proof}

The following is the equivariant version of \cite{madsenStableModuliSpace2007}*{Proposition A.1}, which is proved similarly. 
\begin{prp}\label{prp:Yoneda.Gsheaf}
    Let $\sM$ be a $G$-manifold. Then the composition 
    \begin{align*} 
    \xi \colon \sF(\sM)^G \xrightarrow{Y} \Hom_{\Sh_G(\Man)} (\sM , \sF)^G \xrightarrow{\chi_{\sM} \circ |\Sing_G (\blank )|_G } \Hom_{\kTop_G} (\sM, |\Sing_G (\sF )|_G )^G
    \end{align*}
    induces the isomorphism 
    \begin{align*}
    \sF[\sM]^G \cong [\sM, |\Sing_G (\sF )|_G ]^G.
    \end{align*}
\end{prp}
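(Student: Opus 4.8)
The plan is to follow the strategy of the non-equivariant statement \cite{madsenStableModuliSpace2007}*{Proposition A.1}, now powered by the equivariant triangulation and simplicial approximation lemmas already in hand, \cref{lem:G.triangulation,lem:covering.simplicial.approximation}. First I would check well-definedness of the induced map on homotopy classes, which is routine: a smooth $G$-concordance $\tilde{\sfs} \in \sF^G(\sM \times [0,1])$ yields, via the Yoneda map $Y$, a morphism of $G$-sheaves $\sM \times [0,1] \to \sF$; applying $|\Sing_G(\blank)|_G$ and precomposing with $\chi_{\sM \times [0,1]}$ gives a continuous $G$-map $\sM \times [0,1] \to |\Sing_G \sF|_G$ whose restrictions to $\sM \times \{0,1\}$ realize $\xi(\ev_0\tilde{\sfs})$ and $\xi(\ev_1\tilde{\sfs})$ (up to identifying $\chi_{\sM \times [0,1]}|_{\sM \times \{i\}}$ with $\chi_\sM$, which is harmless since any two right inverses of the $G$-equivalence $\epsilon_\sM$ are $G$-homotopic over $\sM$). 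So $\xi$ descends to a map $\sF[\sM]^G \to [\sM, |\Sing_G\sF|_G]^G$, and it remains to prove surjectivity and injectivity.

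For \emph{surjectivity}, let $F \colon \sM \to |\Sing_G \sF|_G$ be a continuous $G$-map. Applying \cref{lem:covering.simplicial.approximation} to the $G$-simplicial set $\sfX = \Sing_G \sF$ produces a smooth $G$-triangulation $(\{\varphi_\sigma\}, u \colon |\sfT| \to \sM/G)$ of $\sM$ and a $G$-map $F'$, $G$-homotopic to $F$, whose induced map $\sM/G \to |(\Sing_G \sF)/G|$ is simplicial. Unwinding this, each simplex $\sigma \in \sfT$ of dimension $n$ with isotropy sequence $\bm{H}_\sigma = (H_0 \geq \cdots \geq H_n)$ gets assigned an element $\sfs_\sigma \in \Sing_G \sF[n,\bm{H}_\sigma] \subset \sF(\Delta_n^e)$, compatibly with the face maps of $\sfT$ and equivariantly with respect to the $G$-action permuting simplices. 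Using the smooth charts $c_\sigma^e \colon \Delta_n^e \to \sM$ underlying the $\varphi_\sigma$ (from the proof of \cref{lem:G.triangulation}), whose restrictions to the interiors of the embedded simplices cover $\sM$ and whose overlaps are morphisms of $\Man$, the compatible family $\{\sfs_\sigma\}$ assembles through the gluing axiom of the sheaf $\sF$ into a section $\sfs \in \sF(\sM)$; the isotropy invariance recorded by $\Delta_G$ together with $G$-equivariance of the family place $\sfs$ in $\sF^G(\sM)$. By construction $\xi([\sfs]) = [F'] = [F]$. I expect that at this step one must thicken the embedded simplices slightly to get an honest open cover, and it is precisely for this that $\sF$ is evaluated on the \emph{extended} simplices $\Delta_n^e$.

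For \emph{injectivity}, suppose $\sfs_0, \sfs_1 \in \sF^G(\sM)$ with $\xi(\sfs_0) \simeq_G \xi(\sfs_1)$, witnessed by a $G$-homotopy $H \colon \sM \times [0,1] \to |\Sing_G \sF|_G$. I would rerun the surjectivity construction on the $G$-manifold $\sM \times [0,1]$, but \emph{relative} to $\sM \times \{0,1\}$: choose the smooth $G$-triangulation of $\sM \times [0,1]$ so that it restricts on each end $\sM \times \{i\}$ to a $G$-triangulation adapted to $\sfs_i$ (i.e.\ one making $\xi(\sfs_i)$ already $G$-simplicial there), carry out $G$-simplicial approximation of $H$ relative to the ends, and glue the resulting compatible family into $\tilde{\sfs} \in \sF^G(\sM \times [0,1])$ with $\ev_i\tilde{\sfs}$ smoothly $G$-concordant to $\sfs_i$; hence $[\sfs_0]=[\sfs_1]$ in $\sF[\sM]^G$. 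Running the identical argument on pairs $(\sM,\sA)$ (with $\mathbf{1}$ on a neighborhood of $\sA$) gives the relative isomorphism $\sF^G[\sM,\sA] \cong [(\sM,\sA),(|\Sing_G\sF|_G,\mathbf{1})]^G$, completing \cref{thm:equivariant.MW}.

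The main obstacle will be exactly this relative version underlying injectivity: one must simultaneously (i) produce a smooth $G$-triangulation of $\sM \times [0,1]$ restricting on the two ends to prescribed $G$-triangulations adapted to $\sfs_0,\sfs_1$ — reconciling Illman's $\fC_G$-filtered triangulation machinery (via \cite{goreskyTriangulationStratifiedObjects1978}, as used in \cref{lem:G.triangulation}) with the boundary constraint — and (ii) make the $G$-simplicial approximation of $H$ relative to those ends, so the glued concordance genuinely recovers $\sfs_0$ and $\sfs_1$ up to concordance. Threading the isotropy stratification through a relative triangulation and relative simplicial approximation is the point where the equivariant proof goes beyond a verbatim translation of \cite{madsenStableModuliSpace2007}*{Proposition A.1}; by contrast, the non-relative gluing and the $G$-invariance of the glued section are routine consequences of the sheaf axiom and the structure of $\Delta_G$.
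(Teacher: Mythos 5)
Your strategy shares its core ingredients with the paper's proof---Illman-style $G$-triangulation, $G$-simplicial approximation via \cref{lem:covering.simplicial.approximation}, and gluing of locally defined sections through the sheaf axiom---but the paper organizes the argument differently and thereby sidesteps exactly the step you flag as the main obstacle. Rather than proving surjectivity and injectivity separately, the paper constructs an explicit two-sided inverse $\vartheta \colon [\sM, |\Sing_G\sF|_G]^G \to \sF[\sM]^G$. The key device you don't have is a smooth $G$-map $h \colon \sM\times[0,1]\to\sM$ (a squeezing homotopy adapted to the triangulation) with $h_0=\id$, preserving each $G$-simplex and each image $c_\sigma^e(\Delta_n^e)$, and with $h_1$ carrying a $G$-invariant neighborhood $V_\sigma$ of each simplex into that simplex. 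The section $\tilde{\sfs}_\sigma \coloneqq h_1^* \circ \varphi_\sigma^*(\sfs_\sigma)$ on $V_\sigma$ is then manifestly compatible on overlaps, because $h_1$ retracts overlaps into common faces where the simplicial data agree by naturality; this is precisely the precision missing from your ``thicken and glue'' step, which as written does not explain why the thickened sections match on intersections of the extended neighborhoods.

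More significantly, once $\vartheta$ is in hand, the paper verifies $\vartheta\circ\xi=\id$ by a one-line computation ($\vartheta\circ\xi[\sfs]=h_1^*\sfs\simeq\sfs$), and verifies $\xi\circ\vartheta=\id$ by connecting $F$ and $F'\coloneqq h_1^*F$ through a \emph{path-valued} $G$-simplicial map $\sfT\to|\cP\Sing_G\sF|_G$ and invoking Madsen--Weiss's Lemma A.1.2. This replaces your proposed relative $G$-triangulation of $\sM\times[0,1]$ together with relative $G$-simplicial approximation of the witnessing homotopy $H$. Your diagnosis that the relative version is the hard point is correct, but the paper avoids developing a relative Illman-plus-Goresky triangulation theory: the path-sheaf trick converts the injectivity problem into a homotopy in the target, which is handled by a non-relative simplicial argument. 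So your approach would work if you carry through the relative triangulation and approximation machinery, but it is technically heavier than what is actually needed; the $h$-homotopy and path-sheaf argument buy both a cleaner gluing step in surjectivity and a shortcut around the relative equivariant triangulation problem in injectivity.
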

\begin{rmk}
    For any $G$-manifold, the Yoneda lemma shows that the embedding 
    \begin{align*}
    Y \colon \sF(\sM)^G \to \Hom_{\Sh_G(\Man)}(\sM,\sF)^G
    \end{align*}
    given by $Y(\sfs)(f) = f^*\sfs$ for $\sfs \in \sF(\sM)^G$ and $f \in C^\infty(\sN,\sM)$ is bijective with the inverse $\phi \mapsto \phi(\id_{\sM})$. Thus \cref{prp:Yoneda.Gsheaf} claims that $\chi_{\sM} \circ |\Sing_G (\blank)|_G$ induces isomorphisms of homotopy sets. 
\end{rmk}
\begin{proof}
We construct the inverse $\vartheta \colon [\sM, |\Sing_G (\sF )|_G ]^G \to \sF[\sM]^G$ of $\xi$. 
By \cref{lem:covering.simplicial.approximation}, a continuous $G$-map $F \colon \sM \to |\Sing_G \sF|_G$ is $G$-homotopic to the realization of a $G$-simplicial map $\sfT \to \Sing_G \sF$ for some $G$-triangulation $\sfT$ of $\sM$. 
In the same way as \cite{madsenStableModuliSpace2007}*{Proposition A.1}, pick a smooth $G$-map $h \colon \sM \times[0,1] \to \sM$ with $h_0=\id$ and 
\begin{itemize} 
\item $h_t$ sends each $G$-simplex $\varphi_{\sigma}(\Delta_n(G,\bm{H}))$ to itself,
\item $h_t$ sends each subset $c_\sigma^e(\Delta_n^e) \subset \sM$ to itself,
\item each $G$-simplex $\varphi_{\sigma}(\Delta_n(G,\bm{H}))$ has an $G$-invariant open neighborhood $V_\sigma \subset \sM$ with $h_1(V_\sigma) \subset \varphi_{\sigma}(\Delta_n(G,\bm{H}))$.
\end{itemize}
For $\sigma \in \sfT$ and $p \in \pi^{-1}(u(|\sigma|))$, set $F(p) =: (\varphi_\sigma(p),\sfs_\sigma)$, where $\sfs_\sigma \in \Sing_G \sF[n,\bm{H}]$ and $\varphi_\sigma \colon \sM \to \Delta_n$, and 
\begin{align*} 
\tilde{\sfs}_\sigma \coloneqq h_1^* \circ \varphi_\sigma^* (\sfs_{\sigma}) \in \sF(V_{\sigma})^G. 
\end{align*}
The family $\{ \tilde{\sfs}_{\sigma }$ satisfies the gluing condition, and hence is glued together to a global section $\sfs \in \sF(\sM)$. 
We define $\vartheta$ by $\vartheta ([F]) \coloneqq [\sfs]$. 

By definition, we have $\vartheta \circ \xi[\sfs] = h_1^*\sfs $, which is smoothly homotopic to $\sfs$. This shows $\vartheta \circ \xi = \id$. 
Conversely, let $F \colon \sM \to |\Sing_G \sF|_G$ be a continuous $G$-map. Without loss of generality, we may assume that it is $G$-simplicial with respect to a $G$-triangulation $\sfT$. 
Then, the $G$-simplicial map $F' \colon \sfT \to |\Sing_G \sF|_G$ given by $F'(\sigma) = h_1^* F(\sigma)$ represents the homotopy class $\xi \circ \vartheta [F]$. In the same way as \cite{madsenStableModuliSpace2007}*{Proposition A.1}, $F$ and $F'$ are connected by a path-valued $G$-simplicial map $\sfT \to |\cP \Sing_G \sF|_G$. By \cite{madsenStableModuliSpace2007}*{Lemma A.1.2}, this shows $\xi \circ \vartheta  = \id$. 
\end{proof}

\begin{cor}\label{cor:fixed.point.realization}
    For a closed subgroup $H \leq G$, there is a $H$-equivariant map $|\Sing_H \sF|_H \to |\Sing_G \sF|_G$ that induces a weak equivalence of fixed point subgroups $(|\Sing_H \sF|_H)^K \to (|\Sing_G \sF|_G)^K$ for any closed subgroup $K\leq H$. 
    In particular, the fixed point space $(|\Sing_G \sF|_G)^H$ of a $G$-realization is weakly equivalent to the non-equivariant realization $|\Sing \sF^H|$ of the $H$-fixed point sheaf.
\end{cor}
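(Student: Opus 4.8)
The plan is to construct a natural $H$-equivariant comparison map $\phi_H^G\colon |\Sing_H\sF|_H \to \mathrm{Res}_H^G|\Sing_G\sF|_G$, prove that it composes well in $K\leq H\leq G$, and then show it is an $H$-weak equivalence; by the equivariant Whitehead theorem this will reduce, via a two-out-of-three argument, to a single statement about $H$-fixed points which I identify using \cref{prp:fixed.point.Gsing}, \cref{prp:Yoneda.Gsheaf} and \cref{lem:inducction.sheaves}. The ``in particular'' clause is then the case $K=H$ of the first part combined with \cref{prp:fixed.point.Gsing}.

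First I would build $\phi_H^G$. The inclusion $H\hookrightarrow G$ induces a functor $j\colon\Delta_H\to\Delta_G$ sending $[n,\bm K]$ (a decreasing chain of closed subgroups of $H$, hence of $G$) to itself, and a smooth map $f_i\colon\mathop{\mathrm{int}}\Delta_i\to (H/K'_{\iota(i)})^{K_i}=\sHom_H(H/K_i,H/K'_{\iota(i)})$ to its composite with the embedding $(H/K'_{\iota(i)})^{K_i}\hookrightarrow(G/K'_{\iota(i)})^{K_i}$; continuity of the induced map on the $G$-simplices follows from continuity of the $H$-version together with the identification $\Delta_n(G,\bm K)\cong G\times_H\Delta_n(H,\bm K)$. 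Straight from the definitions $\Sing_G\sF\circ j=\Sing_H(\mathrm{Res}_H^G\sF)$, since $\sF(\Delta^e_i)^{K_i}$ is unchanged when a subgroup of $H$ is regarded inside $G$, and the inclusions $\Delta_n(H,\bm K)\hookrightarrow\Delta_n(G,\bm K)$ assemble into a natural transformation $\eta\colon\Delta[\cdot]_H\Rightarrow j^*\Delta[\cdot]_G$ of covariant functors $\Delta_H\to\kTop_H$. Feeding $\eta$ into the coends and composing with the canonical map from a coend over a subcategory to the coend over the ambient category yields $\phi_H^G\colon|\Sing_H\sF|_H\to\mathrm{Res}_H^G|\Sing_G\sF|_G$, manifestly $H$-equivariant and natural in $\sF$, and by functoriality of $j$ and $\eta$ one has $\phi_H^G\circ\phi_K^H=\phi_K^G$ for $K\leq H\leq G$.

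Next, since $|\Sing_H\sF|_H$ and $\mathrm{Res}_H^G|\Sing_G\sF|_G$ are $H$-CW-complexes (the Remark after \cref{thm:equivariant.MW}), $\phi_H^G$ is an $H$-weak equivalence iff for every closed subgroup $L\leq H$ the map $(\phi_H^G)^L$ on $L$-fixed points is an ordinary weak homotopy equivalence. Taking $L$-fixed points of the commuting triangle $|\Sing_L\sF|_L\xrightarrow{\phi_L^H}|\Sing_H\sF|_H\xrightarrow{\phi_H^G}|\Sing_G\sF|_G$ and using two-out-of-three, it suffices to prove the following: \emph{for any compact Lie group $G$, any closed subgroup $H\leq G$, and any $G$-sheaf $\sF$, the map $(\phi_H^G)^H$ is a weak homotopy equivalence.} By \cref{prp:fixed.point.Gsing} the source of $(\phi_H^G)^H$ is exactly $|\Sing\sF^H|$, so this is the assertion that $(|\Sing_G\sF|_G)^H$ is weakly equivalent to $|\Sing\sF^H|$ through $\phi_H^G$.

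To prove that last assertion I would argue via representable homotopy sets and the ordinary Whitehead theorem (exactly as the paper does after \cref{thm:MW}), since both spaces are CW-complexes. For a manifold $\sM$ with trivial $H$-action, $G\times_H\sM=(G/H)\times\sM$ is a smooth $G$-manifold, so the induction–restriction adjunction together with \cref{prp:Yoneda.Gsheaf} gives $[\sM,(|\Sing_G\sF|_G)^H]=[\sM,|\Sing_G\sF|_G]^H\cong[(G/H)\times\sM,|\Sing_G\sF|_G]^G\cong\sF[(G/H)\times\sM]^G$, while $[\sM,|\Sing\sF^H|]\cong\sF^H[\sM]$ by \cref{thm:MW}. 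The induction bijection $\mathrm{ind}_H^G$ of \cref{lem:inducction.sheaves} identifies $\sF^H[\sM]$ with $\sF[(G/H)\times\sM]^G$ naturally in $\sM$; one checks that under these identifications the map induced by $\phi_H^G$ on $[\sM,-]$ is $\mathrm{ind}_H^G$ itself, whence $\phi_H^G$ induces bijections on $[\sM,-]$ for every manifold $\sM$ and the Whitehead theorem finishes. The main obstacle I anticipate is precisely this last compatibility check — tracing the explicit coend-level map $\phi_H^G$ through the triangulation map $\chi_\sM$ of \cref{lem:G.triangulation}, the Yoneda embedding, and the adjunction counit to see that it realizes $\mathrm{ind}_H^G$ — together with the minor bookkeeping that $\mathrm{ind}_H^G$ is also a bijection on concordance classes (immediate from $G\times_H(\sM\times[0,1])=(G\times_H\sM)\times[0,1]$) and that \cref{lem:inducction.sheaves} extends to a general $G$-sheaf, the diffeological hypothesis there being needed only for a pointwise argument that can be replaced by local triviality of $G\times_H\sM\to G/H$ and the gluing axiom.
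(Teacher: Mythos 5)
Your proposal is correct and follows essentially the same route as the paper: both construct the comparison map from the inclusion $\Delta_H\hookrightarrow\Delta_G$ and the identification $\Delta_n(G,\bm K)\cong G\times_H\Delta_n(H,\bm K)$, then verify it is a weak equivalence on fixed points by comparing representable homotopy sets via \cref{prp:Yoneda.Gsheaf} and the induction isomorphism $\sF^K(\sM)\cong\sF^G(\sM\times G/K)$. The only cosmetic differences are that the paper presents the $G$-map $|\Sing_H\sF|_H\times_H G\to|\Sing_G\sF|_G$ directly and handles arbitrary $K\leq H$ in one step via the commutative square with $\xi$, whereas you restrict to an $H$-equivariant map and insert an intermediate two-out-of-three reduction to the case $K=H$; the worry you flag about \cref{lem:inducction.sheaves} is addressed in the paper by simply asserting the chain $\sF^K(\sM)\cong\sF^H(\sM\times H/K)\cong\sF^G(\sM\times G/K)$.
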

\begin{proof}
Let $\mathrm{ind}_H^G \colon \PSh(\Delta_H,\Set) \to \PSh(\Delta_G,\Set)$ be the functor induced from the canonical inclusion $\Delta_H \to \Delta_G$ sending $[n,\bm{L}]$ to itself. 
Since $\Delta_{n}(H,\bm{L}) \times_H G \cong \Delta_n(G,\bm{L})$, this induces a continuous $G$-map 
\begin{align*} 
    \mathop{\mathrm{ind}_H^G} \colon |\Sing_H \sF|_H \times_H G \to |\Sing_G \sF|_G
\end{align*}
For any $\sM \in \Man$, we have that $\sF^K(\sM) \cong \sF^H(\sM \times H/K) \cong \sF^G(\sM \times G/K)$ and the diagram 
\begin{align*}
\xymatrix{
 \sF[\sM \times H/K]^H \ar[r]^{\mathop{\mathrm{ind}_H^G}}_{\cong} \ar[d]^\xi_{\cong} & \sF[\sM \times G/K]^G \ar[d]^\xi_{\cong } \\
 [\sM \times H/K , |\Sing_H \sF|_H]^H \ar[r]^{\mathop{\mathrm{ind}_H^G}} & [\sM \times G/K, |\Sing_G \sF|_G]^G
}
\end{align*}    
commutes. Thus \cref{prp:Yoneda.Gsheaf} shows the desired weak equivalence. The latter claim follows from \cref{prp:fixed.point.Gsing}. 
\end{proof}

\subsection{Enrichment by the monoidal model category \texorpdfstring{$\Sh(\Man)$}{Sh(Man)}}\label{subsection:sheaf.model}
The following subsection provides a category-theoretic background behind our constructive proof of \cref{prp:Yoneda.Gsheaf}. 
As its basis, here we discuss the abstract homotopy theory of the category $\Sh(\Man)$. 
In conclusion, $\Sh(\Man)$ has the structure of a monoidal model category with nice properties required in \cite{guillouEnrichedModelCategories2020}. 
Therefore, it can play the role of a base category in enriched category theory. 
This observation allows us to give a model of the $(\infty,1)$-category and to consider the category of $\infty$-presheaves. 
This viewpoint will be applied to formulate the sheaf version of the Elmendorf theorem in \cref{prp:Elmendorf}. 
 
\begin{rmk}\label{lem:Sh.bicomplete}
    The category $\Sh(\Man)$ is bicomplete and locally presentable. We refer to \cite{maclaneSheavesGeometryLogic1994}*{Section III.6} and \cite{borceuxHandbookCategoricalAlgebra1994}*{Vol.\ 3, Proposition 3.4.16}. 
\end{rmk}

\begin{lem}\label{lem:Sh.closed.monoidal}
    The monoidal product $\sF \times \sG$ given by $(\sF \times \sG)(\sM) \coloneqq \sF(\sM) \times \sG(\sM)$ and the internal Hom sheaf $\sHom (\sF_1,\sF_2) \in \Sh (\Man)$ defined in \eqref{eqn:internal.hom} makes $\Sh_G(\Man)$ a closed symmetric monoidal category via the composition $\sHom (\sF_1,\sF_2) \times \sHom (\sF_2,\sF_3) \to \sHom (\sF_1,\sF_3)$ defined by  
    \begin{align*} 
    \sF_1(\sN) \xrightarrow{\phi_1} \sF_2(\sN \times \sM) \xrightarrow{\phi_2} \sF_3(\sN \times \sM \times \sM) \xrightarrow{\Delta_{\sM^*} } \sF_3(\sN \times \sM).
    \end{align*}
\end{lem}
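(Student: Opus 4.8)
The plan is to identify the claimed structure as a (pointed) incarnation of the cartesian closedness of the sheaf topos $\Sh(\Man)$, and to reduce it to three verifications: that $\times$ is the categorical product, that $\sHom(\sF_1,\sF_2)$ is a sheaf, and that these satisfy an exponential adjunction $\Hom(\sF_1 \times \sF_2, \sF_3) \cong \Hom(\sF_1, \sHom(\sF_2,\sF_3))$ natural in all three variables. Granting these, the monoidal unit is the terminal sheaf $\mathbf{1}$ (constant at the one-point set), and the associator, unitors and symmetry are the evident value-by-value maps, so all pentagon and hexagon coherences are inherited from $\Set_*$; the composition law displayed in the statement is then just the image of the ordinary composition of sheaf morphisms under the curry/uncurry identifications, which one reads off by tracing through the adjunction. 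Throughout one keeps track of basepoints: the basepoint of $(\sF \times \sG)(\sM)$ is $(\mathbf{1}_\sF(\sM), \mathbf{1}_\sG(\sM))$, that of $\sHom(\sM,\sG)(\sN) = \sG(\sM \times \sN)$ is $\mathbf{1}_\sG(\sM \times \sN)$, and that of $\sHom(\sF_1,\sF_2)(\sM)$ is the constant morphism $\sF_1 \to \sHom(\sM,\sF_2)$ factoring through $\mathbf{1}$, so that all the maps below are pointed.

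First I would check that the pointwise product $\sF \times \sG$ is a sheaf and is the categorical product: the gluing condition for $\sF \times \sG$ over an open cover $\{\sU_i\}$ of $\sM$ is the conjunction of the gluing conditions for $\sF$ and for $\sG$, and the termwise projections and pairing supply the universal property — this is just the fact that limits of sheaves are computed objectwise (\cref{lem:Sh.bicomplete}). Next I would show that $\sHom(\sM,\sG)(\blank) = \sG(\sM \times \blank)$ is a sheaf: for an open cover $\{\sU_i\}$ of a manifold $\sN$, the family $\{\sM \times \sU_i\}$ is an open cover of $\sM \times \sN$ and $\{\sM \times (\sU_i \cap \sU_j)\}$ covers its overlaps, so the gluing condition for $\sHom(\sM,\sG)$ over $\{\sU_i\}$ is verbatim the gluing condition for $\sG$ over $\{\sM \times \sU_i\}$. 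Then I would show that $\sHom(\sF_1,\sF_2)$, i.e.\ the presheaf $\sM \mapsto \Hom_{\Sh(\Man)}(\sF_1, \sHom(\sM,\sF_2))$, is a sheaf: given an open cover $\{\sU_i\}$ of $\sM$ and a compatible family of morphisms $\psi_i \colon \sF_1 \to \sHom(\sU_i,\sF_2)$, for each manifold $\sN$ and each $\sfs \in \sF_1(\sN)$ the elements $\psi_i(\sfs) \in \sF_2(\sU_i \times \sN)$ agree after restriction to $\sF_2((\sU_i \cap \sU_j) \times \sN)$, hence glue uniquely, by the sheaf property of $\sF_2$, to an element $\psi(\sfs) \in \sF_2(\sM \times \sN) = \sHom(\sM,\sF_2)(\sN)$; naturality of $\sfs \mapsto \psi(\sfs)$ in $\sN$, the fact that it is a pointed morphism, and the uniqueness of the glued morphism are routine diagram chases. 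Finally I would establish the adjunction: unwinding the right-hand side, a morphism $\sF_1 \to \sHom(\sF_2,\sF_3)$ is precisely an assignment, to each manifold $\sM$, each $\sfs \in \sF_1(\sM)$, each manifold $\sN$ and each $\sft \in \sF_2(\sN)$, of an element of $\sF_3(\sM \times \sN)$, natural in $\sM$ and $\sN$ and compatible with basepoints, which by the Yoneda lemma — using that $\Man$ has finite products, so that the representable sheaf $C^\infty(\blank,\sM \times \sN)$ is the product $C^\infty(\blank,\sM) \times C^\infty(\blank,\sN)$ — is exactly the datum of a sheaf morphism $\sF_1 \times \sF_2 \to \sF_3$; checking that this bijection is natural in $\sF_1,\sF_2,\sF_3$ completes the argument.

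The main obstacle is the third step, that $\sHom(\sF_1,\sF_2)$ is again a sheaf, together with the bookkeeping that identifies all the structure maps at the level of sheaves. Conceptually there is no real difficulty: this is the standard fact that in a category of sheaves the internal hom of presheaves, when the target is a sheaf, is automatically a sheaf (see e.g.\ \cite{maclaneSheavesGeometryLogic1994}*{Section III.6}), so it may simply be cited; the only points requiring attention in our setting are that $\Sh(\Man)$ means sheaves for the Grothendieck topology of open covers used throughout \cref{subsection:sheaf} (so ``sheaf'' is the explicit gluing condition recalled there) and that the values lie in $\Set_*$ rather than $\Set$ — but the gluing above is carried out value-by-value inside $\sF_2$ and the basepoints match at every stage, so neither affects the argument. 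Everything else is an immediate consequence of limits of sheaves being computed objectwise, of $\Man$ having finite products, or of the corresponding coherence data in $\Set_*$.
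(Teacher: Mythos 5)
Your proposal is correct and takes essentially the same approach as the paper: both reduce the lemma to the exponential adjunction $\Hom(\sG_1,\sHom(\sF,\sG_2)) \cong \Hom(\sG_1 \times \sF, \sG_2)$. The paper's proof is much terser — it simply writes down the unit $\eta_{\sG}(\sM)(\sfs) = \id_{\sF}\times\sfs$ and counit $\epsilon_{\sG}(\sM)(\sfs\times\phi)=\Delta_{\sM}^*\phi(\sfs)$ and stops there, taking for granted that the product and internal Hom are sheaves; your version supplies those sheaf checks and replaces the explicit unit/counit with a Yoneda-style unwinding, which encodes the same data (the counit is precisely your diagonal restriction, and the unit is the projection-pullback you use in the other direction).
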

\begin{proof}
    It suffices to show that the internal Hom functor is right adjoint to the tensor product, i.e.,  
    \begin{align*}
        \Hom_{\Sh(\Man)} (\sG_1, \sHom( \sF, \sG_2)) \cong \Hom_{\Sh(\Man)} (\sG_1 \times \sF , \sG_2).
    \end{align*}
    This isomorphism is given by the adjunction unit $\eta \colon \id \to \sHom (\sF, (\sF \times \blank ))$ defined by 
    \[
    \eta_{\sG}(\sM)(\sfs) \coloneqq \id_{\sF} \times \sfs \in \Hom (\sF,\Hom(\sM,\sF \times \sG))
    \]
    for $\sfs \in \sG(\sM)$, and the counit $\epsilon \colon \sF \times \sHom (\sF, \blank ) \to \id$ defined by 
    \[
    \epsilon_{\sG}(\sM)(\sfs \times \phi) \coloneqq \Delta_{\sM}^* \phi(\sfs) \in \sG(\sM)
    \]
    for $\sfs \times \phi \in \sF(\sM) \times \Hom (\sF, \Hom(\sM,\sG ) )$.
\end{proof}

For a simplicial set $\sfX$, its smooth realization defined in \cite{pavlovProjectiveModelStructures2022}*{Definition 3.3} by
\begin{align*}
    |\sfX|_{\mathrm{sm}} \coloneqq \colim_{x \in \Delta/\sfX} C^\infty ( \blank, \{\sfx \} \times \Delta_n^e),
\end{align*}
where $\Delta/\sfX$ denotes the category whose objects are simplices $\sfx$ of $\sfX$ and a morphism from an $m$-simplex $\sfx$ to an $n$-simplex $\sfy$ is given by a morphism $f \colon [m] \to [n]$ in $\Delta$ such that $f^*(\sfy) =\sfx$. 
More explicitly, a section $ |\sfX|_{\mathrm{sm}}(\sM)$ is given by a family of germs $(\sfx_p,\rho_p) \in \sfX_{n_p} \times C^\infty_p(\sM, \Delta_n^e)$ satisfying the following gluing property: For any $p \in \sM$, there is an open neighborhood $\sU_p$ and $\tilde{\rho}_p \in C^\infty(\sU_p,\Delta_{n_p}^e)$ such that $\sfx_q = f_{q,p}^*(\sfx_p) $ and $\rho_q = \tilde{\rho}_q$ for any $q \in \sU$, where $f_{q,p} \colon [n_q] \to [n_p]$ is uniquely characterized by $\Im \rho_q \subset f_{q,p}(\Delta_{n_q}^e)$.  
This smooth geometric realization gives a functor $|\blank |_{\mathrm{sm}} \colon \sSet \to \Sh(\Man)$.

\begin{lem}
    The smooth geometric realization and the smooth singular functors forms an adjoint pair
    \begin{align} 
    \xymatrix{
    |\blank|_{\mathrm{sm}} \colon \sSet \ar@<1ex>[r] \ar@{}[r]|{\bot \hspace{4ex}} & 
    \Sh(\Man)\ar@<1ex>[l] \colon \Sing .
    }\label{eqn:adjoint.singular.realization}
    \end{align}
\end{lem}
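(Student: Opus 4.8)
The plan is to exhibit the natural bijection
\[
\Hom_{\Sh(\Man)}\big(|\sfX|_{\mathrm{sm}},\sF\big)\cong \Hom_{\sSet}\big(\sfX,\Sing\sF\big)
\]
and check that it is natural in both variables; this is the content of the asserted adjunction. First I would reduce to representables: since every simplicial set is a colimit of standard simplices $\sfX\cong\colim_{(\sfx\colon\Delta[n]\to\sfX)}\Delta[n]$, and $|\blank|_{\mathrm{sm}}$ is a left adjoint if and only if it preserves colimits and the adjunction holds on representables, it suffices to establish the bijection for $\sfX=\Delta[n]$. By the very definition $|\Delta[n]|_{\mathrm{sm}}=C^\infty(\blank,\Delta_n^e)$ is the representable sheaf on the manifold $\Delta_n^e$ (via the Yoneda embedding $j\colon\Man\hookrightarrow\Sh(\Man)$ recalled in \cref{subsubsection:definition.sheaf}, extended to allow manifolds with corners as in \cref{rmk:smooth.section.boundary}). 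Hence by the Yoneda lemma for sheaves,
\[
\Hom_{\Sh(\Man)}\big(|\Delta[n]|_{\mathrm{sm}},\sF\big)\cong \sF(\Delta_n^e)=\Sing\sF[n]=\Hom_{\sSet}\big(\Delta[n],\Sing\sF\big),
\]
where the last equality is the Yoneda lemma in $\sSet$. This gives the bijection on representables; naturality with respect to maps $[m]\to[n]$ in $\Delta$ is immediate from functoriality of $\Sing$ and from the definition of $|\blank|_{\mathrm{sm}}$ as a colimit over $\Delta/\sfX$.

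Next I would upgrade this to general $\sfX$. Writing $\sfX=\colim_{\Delta/\sfX}\Delta[n]$ and using that $|\blank|_{\mathrm{sm}}$ is defined precisely as the corresponding colimit $\colim_{\Delta/\sfX}C^\infty(\blank,\Delta_n^e)$ in $\Sh(\Man)$, one gets
\[
\Hom_{\Sh(\Man)}\big(|\sfX|_{\mathrm{sm}},\sF\big)\cong\lim_{\Delta/\sfX}\Hom_{\Sh(\Man)}\big(C^\infty(\blank,\Delta_n^e),\sF\big)\cong\lim_{\Delta/\sfX}\Sing\sF[n]\cong\Hom_{\sSet}(\sfX,\Sing\sF),
\]
the middle isomorphism being the representable case just proved and the outer ones being the universal properties of the colimit defining $|\sfX|_{\mathrm{sm}}$ and of $\sfX$ as a colimit in $\sSet$. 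Here I am using that $\Sh(\Man)$ is complete and cocomplete (\cref{lem:Sh.bicomplete}), so all these (co)limits exist; the colimit computation is exactly the content of the explicit germ-description of sections of $|\sfX|_{\mathrm{sm}}$ given after \cite{pavlovProjectiveModelStructures2022}*{Definition 3.3}. One then checks naturality in $\sfX$ (both sides are contravariant functors of $\sfX$ sending colimits to limits, and agree on representables, hence agree) and in $\sF$ (both sides are covariant in $\sF$ and the isomorphism is built from the Yoneda isomorphism, which is natural).

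The one point requiring a little care — and what I expect to be the main obstacle — is verifying that the explicit formula for $|\sfX|_{\mathrm{sm}}(\sM)$ in terms of compatible families of germs $(\sfx_p,\rho_p)$ genuinely computes the colimit $\colim_{\Delta/\sfX}C^\infty(\blank,\Delta_n^e)$ in the category of sheaves, rather than merely in presheaves. This is where sheafification enters: the colimit of representable presheaves need not be a sheaf, and one must confirm that the gluing condition built into the germ-description is exactly the sheafification. Concretely I would argue that a section over $\sM$ of the sheafified colimit is locally (on an open cover) a section of the presheaf-level colimit $\colim_{\Delta/\sfX}C^\infty(\sU,\Delta_n^e)$, i.e.\ locally of the form ``a simplex $\sfx\in\sfX_n$ together with a smooth map $\sU\to\Delta_n^e$'', and that two such local data agree on overlaps precisely under the face/degeneracy compatibility $\sfx_q=f_{q,p}^*\sfx_p$; this matches the description in the excerpt verbatim. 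Once this identification is in hand, the adjunction on representables propagates through the colimit formally, and the remaining naturality checks are routine diagram chases that I would not spell out in detail.
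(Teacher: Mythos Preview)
Your argument is correct and follows the standard nerve--realization paradigm: since $|\blank|_{\mathrm{sm}}$ is by definition the left Kan extension of $[n]\mapsto C^\infty(\blank,\Delta_n^e)$ along the Yoneda embedding $\Delta\hookrightarrow\sSet$, it is automatically left adjoint to the induced nerve $\sF\mapsto\Hom_{\Sh(\Man)}(C^\infty(\blank,\Delta_\bullet^e),\sF)=\sF(\Delta_\bullet^e)=\Sing\sF$. Your reduction to representables via Yoneda and the terminal object of $\Delta/\Delta[n]$ is exactly this argument spelled out.

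The paper takes the complementary route: it writes down explicit unit $\eta_\sfX(\sfx)=\{\sfx\}\times\id_{\Delta_n^e}$ and counit $\epsilon_\sF$ (evaluation/gluing of germs) and asserts the triangle identities. The two approaches are equivalent and equally standard; yours is more conceptual, while the paper's has the advantage of making the unit and counit available for later use (e.g.\ in \cref{rmk:Sh.proper}, where $\epsilon_\sF$ being a weak equivalence is used). One minor comment: the issue you flag as ``the main obstacle'' --- whether the germ formula computes the sheaf colimit --- is in fact orthogonal to the adjunction. Your chain of isomorphisms only uses that $|\sfX|_{\mathrm{sm}}$ is \emph{defined} as the colimit in $\Sh(\Man)$; the explicit germ description is a separate computation, useful for understanding sections but not needed for the hom-set bijection.
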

\begin{proof}
    This adjunction is given by the unit $\eta \colon \id \to \Sing (|\blank|_{\mathrm{sm}})$ defined by 
    \[
    \eta (\sfx) \coloneqq \{ \sfx \} \times  \id_{\Delta_n^e}\in C^\infty (\Delta_n^e , \{ \sfx \} \times \Delta_n^e) \subset C^\infty(\Delta_n^e, |\sfX|_{\mathrm{sm}})
    \]
    for $\sfx \in \sfX_n$, and the counit $\epsilon \colon |\Sing (\blank)|_{\mathrm{sm}} \to \id$ is the evaluation map defined by 
    \[
    \epsilon_{\sF}((\sfs_p,\rho_p)_{p \in \sM} ) \coloneqq (\text{the gluing of the germs $\rho_p^*\sfs \in \sF_p$}) \in \sF(\sM)
    \]
    for a smooth map $\varphi \colon \sM \to |\Sing \sF|_{\mathrm{sm}}$. It is verified in the same way as the ordinary realization-singular adjunction that they satisfy the conditions for unit and counit.   
\end{proof}

\begin{lem}[{\cite{pavlovProjectiveModelStructures2022}*{Theorem 7.4}}]\label{lem:Sh.model}
    There is a cofibrantly generated (\cite{hirschhornModelCategoriesTheir2003}*{Definition 11.1.1}) model structure on $\Sh(\Man)$ such that the adjoint pair \eqref{eqn:adjoint.singular.realization} is a Quillen equivalence. 
\end{lem}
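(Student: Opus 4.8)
The plan is to obtain the model structure by \emph{transferring} the Kan--Quillen model structure on $\sSet$ along the adjunction \eqref{eqn:adjoint.singular.realization}, using a standard recognition (transfer) theorem for cofibrantly generated model categories. Concretely, I would declare a morphism $\phi$ of $\Sh(\Man)$ to be a \emph{weak equivalence} (resp.\ a \emph{fibration}) when $\Sing(\phi)$ is a weak homotopy equivalence (resp.\ a Kan fibration) of simplicial sets, and a \emph{cofibration} when it has the left lifting property against all acyclic fibrations. Take as candidate generating sets $I\coloneqq\{\,|\partial\Delta^n\hookrightarrow\Delta^n|_{\mathrm{sm}}\,\}_{n\geq 0}$ and $J\coloneqq\{\,|\Lambda^n_k\hookrightarrow\Delta^n|_{\mathrm{sm}}\,\}_{n\geq 1,\,0\leq k\leq n}$, the smooth realizations of the usual generating cofibrations and acyclic cofibrations of $\sSet$. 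By \cref{lem:Sh.bicomplete} the category $\Sh(\Man)$ is bicomplete and locally presentable, so the small object argument applies to $I$ and $J$; and by the adjunction $|\blank|_{\mathrm{sm}}\dashv\Sing$, a morphism has the right lifting property against $I$ (resp.\ $J$) exactly when its image under $\Sing$ is an acyclic Kan fibration (resp.\ a Kan fibration), so the $I$-injectives coincide with the acyclic fibrations in the sense above. The recognition theorem then yields a cofibrantly generated model structure with generating (acyclic) cofibrations $I$ (resp.\ $J$), provided one verifies the \emph{acyclicity condition}: every relative $J$-cell complex is a weak equivalence.

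The acyclicity condition is the crux. I would first check that each generating map $|\Lambda^n_k\hookrightarrow\Delta^n|_{\mathrm{sm}}$ is itself a weak equivalence: the horn inclusion $\Lambda^n_k\hookrightarrow\Delta^n$ has a simplicial deformation retraction, and the left adjoint $|\blank|_{\mathrm{sm}}$ (built from the smooth interval $\Delta^1_e$, which serves as a cylinder) carries it to a smooth deformation retraction, whence $\Sing$ of it is a homotopy equivalence of simplicial sets. The genuinely delicate point is stability of weak equivalences under the transfinite colimits defining cell complexes --- one must show $\Sing$ sends pushouts along maps of $J$, and transfinite compositions thereof, to weak homotopy equivalences. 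This is where one uses that the smooth realization is ``good'', i.e.\ homotopy-colimit-preserving; the key is that the relevant colimits are local on a test manifold $\sM$, so a section of a $J$-cell complex is built locally from finitely many smooth simplices glued along partitions of unity, and $\Sing$ --- which commutes with filtered colimits and, via the sheaf-gluing axiom together with the fibration/Mayer--Vietoris machinery of \cref{lem:survey.homotopy}, detects and reflects weak equivalences --- preserves these colimits up to weak equivalence. This is exactly the technical content of \cite{pavlovProjectiveModelStructures2022}*{Theorem 7.4}, which I would invoke rather than reprove, having verified that all of its hypotheses hold here.

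For the Quillen equivalence, \eqref{eqn:adjoint.singular.realization} is automatically a Quillen adjunction once the model structure exists, since $|\blank|_{\mathrm{sm}}$ sends generating (acyclic) cofibrations of $\sSet$ to (acyclic) cofibrations of $\Sh(\Man)$ by construction. Because every simplicial set is cofibrant and $\Sing$ reflects weak equivalences, it suffices to show the derived unit $\eta_{\sfX}\colon \sfX\to\Sing(|\sfX|_{\mathrm{sm}})$ is a weak homotopy equivalence for every $\sfX\in\sSet$; the derived counit then comes for free, because for any $\sF$ the composite $\Sing(\sF)\xrightarrow{\eta_{\Sing\sF}}\Sing|\Sing\sF|_{\mathrm{sm}}\xrightarrow{\Sing(\epsilon_\sF)}\Sing(\sF)$ is the identity and the first map is a weak equivalence, forcing $\Sing(\epsilon_\sF)$ to be one. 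That $\eta_{\sfX}$ is a weak equivalence is the ``shape'' statement $\Sing|\sfX|_{\mathrm{sm}}\simeq\sfX$: for $\sfX=\Sing\sM$ with $\sM$ a manifold it is essentially \cref{thm:MW} (read as $|\Sing|\Sing\sM|_{\mathrm{sm}}|\simeq\sM\simeq|\Sing\sM|$), and the general case follows by induction on skeleta, using that $\Sing$ and $|\blank|_{\mathrm{sm}}$ take skeletal pushouts to homotopy pushouts --- again \cite{pavlovProjectiveModelStructures2022}. I expect the acyclicity step of the transfer to be the main obstacle; the remaining arguments are formal manipulations of the adjunction and the recognition theorem.
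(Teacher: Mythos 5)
Your strategy is the same as the paper's: right transfer of the Kan--Quillen model structure along $|\blank|_{\mathrm{sm}}\dashv\Sing$, with the smooth realizations of the standard generating sets, and local presentability supplying the small-object argument. For the acyclicity condition, the paper's proof makes the limit-ordinal step explicit --- by compactness of $S^n$ and \cref{thm:MW}, the functor $\Sing$ commutes with $\kappa$-filtered colimits at the level of $\pi_n$, so transfinite compositions preserve weak equivalences --- whereas you invoke Pavlov's Theorem~7.4 for this stability. Since the lemma is already cited to Pavlov, that is acceptable, though you then reprove less than the paper does at the limit-ordinal step while the paper leaves the successor (single-pushout) step implicit and you flag it.

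One genuine slip in your sketch: the horn inclusion $\Lambda^n_k\hookrightarrow\Delta^n$ does \emph{not} admit a simplicial deformation retraction --- indeed there is no simplicial retraction $\Delta^n\to\Lambda^n_k$ at all for $n\geq 2$, since $\Delta^n$ has a nondegenerate $n$-simplex and $\Lambda^n_k$ does not. So the step where $|\blank|_{\mathrm{sm}}$ ``carries it to a smooth deformation retraction'' does not follow from the premise as you wrote it. The conclusion that $|\Lambda^n_k|_{\mathrm{sm}}\hookrightarrow|\Delta^n|_{\mathrm{sm}}$ is a weak equivalence is still correct, but you would need to construct the smooth deformation retraction directly (e.g.\ by smoothing the standard piecewise-linear one) rather than transporting a nonexistent simplicial one. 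This is repairable, and in any case absorbed into your eventual appeal to Pavlov. Finally, your sketch of the Quillen-equivalence half (derived unit by skeletal induction, derived counit by two-out-of-three) goes beyond what the paper's own proof of this lemma contains: the paper defers that part to Pavlov, and \cref{rmk:Sh.proper} then records the weak-equivalence property of the counit $|\Sing\sF|_{\mathrm{sm}}\to\sF$ as a consequence of the Quillen equivalence rather than as a step in establishing it.
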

\begin{proof}
    We apply Kan's criterion of the existence of right transfer model structure \cite{hirschhornModelCategoriesTheir2003}*{Theorem 11.3.2}. Let $\cI$ be the set of boundary inclusions $\partial \Delta_n \to \Delta_n$ and let $\cJ$ be the set of horn inclusions $\Lambda_{n,k} \to \Delta_n$, which are the generating cofibrations and generating acyclic cofibrations of $\sSet$. We write $|\cI|_{\mathrm{sm}}$ and $|\cJ|_{\mathrm{sm}}$ for their images by the geometric realization functor. 
    Since $\Sh(\Man)$ is locally presentable, it suffices to show that 
    any relative $|\cJ|_{\mathrm{sm}}$-cell complex is a weak equivalence.   
    Let $\kappa$ be a regular ordinal and if $\{ \sF_{\beta}, f_\beta \}_{\beta < \kappa}$ is a $\kappa$-sequence such that each $f_{\beta}$ is a pushout via the geometric realization of horn inclusions $|\Lambda_n^k|_{\mathrm{sm}} \to |\Delta_n|_{\mathrm{sm}}$. 
    Then, by definition of the colimit of sheaves on $\Man$, the compactness of $S^n$, and \cref{thm:MW}, we have 
    \[
        \pi_n\Big(\Sing \colim_{\beta < \kappa'} \sF_\beta \Big) \cong  \Big( \colim_{\beta < \kappa'} \sF_\beta \Big) [S^n] \cong \colim_{\beta < \kappa'} (\sF_\beta [S^n]) \cong \colim_{\beta < \kappa'} \pi_n (\Sing \sF_\beta)
    \]
    for any limit ordinal $\kappa' \leq \kappa$. 
    Hence, a transfinite induction shows that the transfinite composition $\Sing f_{0,\kappa} \colon \Sing \sF_0 \to \Sing (\colim_{\beta < \kappa} \sF_\beta)$ induces an isomorphism of homotopy groups, and hence is a weak equivalence. The model structure on $\Sh(\Man)$ is cofibrantly generated by definition (cf.\ \cite{hirschhornModelCategoriesTheir2003}*{Remark 11.1.3}). 
\end{proof}

\begin{rmk}\label{rmk:Sh.proper}
    A morphism $\phi \colon \sF \to \sG$ is a fibration with respect to this model structure if and only if $\Sing \phi \colon \Sing \sF \to \Sing \sG$ is a fibration, and hence if and only if $\phi$ is a fibration in the sense of \cref{paragraph:fibration}. In particular, the model structure on $\Sh(\Man)$ is right proper (\cite{hirschhornModelCategoriesTheir2003}*{Definition 13.1.1}). It is proved by Pavlov in \cite{pavlovProjectiveModelStructures2022}*{Theorem 7.4} that this model structure is proper. 
    
    This observation also shows that every object $\sF \in \Sh(\Man)$, which is sent to a Kan complex, is fibrant. Together with the cofibrancy of $\Sing \sF \in \sSet$, we obtain that the adjunction counit $\epsilon \colon |\Sing \sF|_{\mathrm{sm}} \to \sF$ is a weak equivalence by \cite{hoveyModelCategories1999}*{Corollary 1.3.16} (cf.\ \cref{thm:MW}).  
\end{rmk}

\begin{rmk}\label{rmk:Sh.monoidal.model}
    The model structure on $\Sh(\Man)$ is monoidal (\cite{hoveyModelCategories1999}*{Definition 4.2.6}). Since the monoidal unit $\mathbf{1} \in \Sh(\Man)$ is cofibrant, it suffices to show that 
    \[
    \phi^* \times\psi_* \colon \sHom (\sF_2, \sG_1) \to \sHom (\sF_1, \sG_1) \times_{\sHom(\sF_1,\sG_2)} \sHom (\sF_2, \sG_2)
    \]
    is a weak equivalence for any cofibration $\phi \colon \sF_1 \to \sF_2$ and any fibration $\psi \colon \sG_1 \to \sG_2$. This statement is proved in the same way as \cite{hoveyModelCategories1999}*{Proposition 4.2.11}, by using the Quillen equivalence \eqref{eqn:adjoint.singular.realization} instead of the analogous adjunction $\sSet \leftrightarrows \kTop$. 
\end{rmk}

\begin{rmk}\label{rmk:Sh.monoid}
    The weak equivalence of $\Sh(\Man)$ is closed under the transfinite composition. In particular, the model category $\Sh(\Man)$ satisfies the monoid axiom (\cite{guillouEnrichedModelCategories2020}*{Definition 4.26}).
\end{rmk}

By \cref{lem:Sh.bicomplete,lem:Sh.closed.monoidal,rmk:Sh.proper,rmk:Sh.monoidal.model,lem:Sh.model,rmk:Sh.monoid}, the category $\Sh(\Man)$ can serve as an enriching category. As well as $\sSet$ and $\kTop$, a $\Sh(\Man)$-enriched category gives a model of $(\infty,1)$-category. For example, the category $\Sh_G(\Man)$ of $G$-sheaves on $\Man$ is a $\Sh(\Man)$-enriched category by using the internal Hom sheaf $\sHom_G (\sF,\sG) \in \Sh(\Man)$ given by $\sHom_G (\sF,\sG)(\sM) \coloneqq \Hom_G(\sF,\Hom(\sM,\sG))$ as morphism objects.

\subsection{The Elmendorf theorem for \texorpdfstring{$G$}{G}-sheaves}\label{subsection:Gsheaf.category}
We are ready to formulate the Elmendorf theorem for $G$-sheaves on $\Man$.
As is explained at the end of the previous subsection, the category $\Sh_G(\Man)$ of $G$-sheaves is endowed with the structure of a $\Sh(\Man)$-enriched category.

\begin{defn}
    We use the same letter $\Or (G)$ for the $\Sh(\Man)$-enriched full subcategory of $\Sh_G(\Man)$ generated by homogeneous spaces $G/H$ via the Yoneda embedding $Y \colon \Man_G \to \Sh_G(\Man )$. We write $\PSh_{\infty}(\Or(G) , \Sh(\Man))$ for the category of contravariant $\Sh(\Man)$-enriched functors from $\Or(G)$ to $\Sh(\Man)$.   
\end{defn}

We remark that the morphism sheaf $\sHom_G (G/H,G/K)$ in the category $\Or(G)$ is the manifold $(G/K)^H$. For $G/H \in \Or(G)$, the functor $F_{G/H} \colon \Sh(\Man) \to \PSh_\infty(\Or(G), \Sh(\Man))$ is given by $F_{G/H}(\sF) (G/K)= \sF \times \sHom_G (G/K , G/H) $ (\cite{guillouEnrichedModelCategories2020}*{Definition 1.6}). This $F_{G/H}$ is left adjoint to $\ev_{G/H} \colon \PSh_{\infty}(\Or(G),\Sh(\Man)) \to \Sh(\Man)$. 

\begin{prp}\label{prp:levelwise.model}
    The category $\PSh_{\infty}(\Or(G), \Sh(\Man))$ has the `levelwise' model structure characterized by 
    \begin{itemize}
        \item the set of generating cofibrations given by $F|\cI|_{\mathrm{sm}} = \bigcup_{G/H \in \Or(G)} F_{G/H}  (|\cI|_{\mathrm{sm}})$, and
        \item the set of generating acyclic cofibrations given by $F|\cJ|_{\mathrm{sm}} = \bigcup_{G/H \in \Or(G)} F_{G/H}  (|\cJ|_{\mathrm{sm}})$. 
    \end{itemize}
    Here, $|\cI|_{\mathrm{sm}}$ and $|\cJ|_{\mathrm{sm}}$ be as in the proof of \cref{lem:Sh.model}. In this model structure, $\phi \colon \sF \to \sG$ is a weak equivalence (resp.\ fibration) if and only if $\phi_{G/H} \colon \sF(G/H) \to \sG(G/H)$ is a weak equivalence (resp.\ fibration) for any $G/H \in \Or(G)$. 
\end{prp}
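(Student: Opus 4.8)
The plan is to obtain the asserted model structure as an instance of Kan's recognition theorem for cofibrantly generated model categories, \cite{hirschhornModelCategoriesTheir2003}*{Theorem 11.3.2}, applied to $\PSh_{\infty}(\Or(G),\Sh(\Man))$ with generating cofibrations $F|\cI|_{\mathrm{sm}}$, generating acyclic cofibrations $F|\cJ|_{\mathrm{sm}}$, and with the class $W$ of \emph{levelwise} weak equivalences (those $\phi$ such that $\ev_{G/H}\phi$ is a weak equivalence in $\Sh(\Man)$ for all $G/H \in \Or(G)$); this is the standard mechanism producing the projective model structure on $\mathcal{V}$-enriched presheaves over a well-behaved base $\mathcal{V}$, and all the input it requires about $\Sh(\Man)$ has already been assembled in \cref{lem:Sh.bicomplete,lem:Sh.closed.monoidal,lem:Sh.model,rmk:Sh.monoidal.model,rmk:Sh.monoid} (cf.\ \cite{guillouEnrichedModelCategories2020}). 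First I would record the formal preliminaries: $\PSh_{\infty}(\Or(G),\Sh(\Man))$ is bicomplete and locally presentable, being a category of $\Sh(\Man)$-enriched functors from the small $\Sh(\Man)$-category $\Or(G)$ into the locally presentable closed symmetric monoidal base $\Sh(\Man)$ (\cref{lem:Sh.bicomplete,lem:Sh.closed.monoidal}); hence every object is small and the small object argument is available, which discharges the smallness hypotheses on the domains of $F|\cI|_{\mathrm{sm}}$ and $F|\cJ|_{\mathrm{sm}}$. I would also note that colimits in $\PSh_{\infty}(\Or(G),\Sh(\Man))$ are computed levelwise, so each $\ev_{G/H}$ preserves all colimits, and that $F_{G/H}$, being left adjoint to $\ev_{G/H}$, preserves cofibrations; the latter gives at once that every map in $F|\cJ|_{\mathrm{sm}}$ is an $F|\cI|_{\mathrm{sm}}$-cofibration, since each element of $|\cJ|_{\mathrm{sm}}$ is an (acyclic) cofibration of $\Sh(\Man)$.

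The substantive point is that every relative $F|\cJ|_{\mathrm{sm}}$-cell complex lies in $W$. Here I would apply $\ev_{G/H}$: as it preserves colimits it carries such a cell complex to a relative cell complex built from the maps $\ev_{G/H}F_{G/K}(j) = j \times \id_{\sHom_G(G/H,G/K)}$ with $j \in |\cJ|_{\mathrm{sm}}$ and $G/K \in \Or(G)$. Since $\sHom_G(G/H,G/K) = Y\big((G/K)^H\big)$ is simply an object of $\Sh(\Man)$, these maps are cartesian products of generating acyclic cofibrations of $\Sh(\Man)$ with arbitrary objects, and by the monoid axiom for $\Sh(\Man)$ (\cref{rmk:Sh.monoid}) every relative cell complex formed from such maps is a weak equivalence. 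Thus $\ev_{G/H}$ of a relative $F|\cJ|_{\mathrm{sm}}$-cell complex is a weak equivalence for every $G/H$, i.e.\ the cell complex belongs to $W$. This is the step I expect to be the main obstacle, in the sense that it is the only place where anything beyond formal manipulation enters; the monoid axiom of \cref{rmk:Sh.monoid} was stated precisely to make it go through, and one only has to be careful that the relevant colimits in $\PSh_\infty$ really are computed levelwise.

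The remaining hypotheses of \cite{hirschhornModelCategoriesTheir2003}*{Theorem 11.3.2} are then routine: $W$ satisfies two-out-of-three and is closed under retracts because this holds levelwise; and the two inclusions relating lifting classes to $W$ follow from the adjunction $F_{G/H}\dashv\ev_{G/H}$ together with the fact that $|\cI|_{\mathrm{sm}}$ (resp.\ $|\cJ|_{\mathrm{sm}}$) generates the cofibrations (resp.\ acyclic cofibrations) of $\Sh(\Man)$ (\cref{lem:Sh.model}) — a map has the right lifting property against $F|\cI|_{\mathrm{sm}}$ iff each $\ev_{G/H}\phi$ is an acyclic fibration of $\Sh(\Man)$, which forces $\phi \in W$ and $\phi \in F|\cJ|_{\mathrm{sm}}\text{-inj}$, and symmetrically a map in $W \cap F|\cJ|_{\mathrm{sm}}\text{-inj}$ has each $\ev_{G/H}\phi$ an acyclic fibration, hence is $F|\cI|_{\mathrm{sm}}$-injective. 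This yields the cofibrantly generated model structure with the stated generating sets and with weak equivalences $W$. Finally, the characterization of fibrations is read off from the same adjunction: $\phi$ is a fibration iff it has the right lifting property against $F|\cJ|_{\mathrm{sm}}$ iff each $\ev_{G/H}\phi$ has the right lifting property against $|\cJ|_{\mathrm{sm}}$, which by \cref{lem:Sh.model} and \cref{rmk:Sh.proper} means exactly that each $\ev_{G/H}\phi$ is a fibration in $\Sh(\Man)$; and weak equivalences are levelwise by the definition of $W$.
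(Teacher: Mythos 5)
Your proof is correct and reaches the goal by a closely related but distinct route. Where the paper delegates the formal bookkeeping to Guillou--May Theorem 4.32, you run Kan's recognition theorem (Hirschhorn, Theorem~11.3.2) directly; this is cosmetic, since the former is a packaged form of the latter for enriched presheaves, and both reduce everything to a single non-formal input. The substantive difference is how you discharge that input, namely that relative $F|\cJ|_{\mathrm{sm}}$-cell complexes are levelwise weak equivalences. After applying $\ev_{G/H}$ (which preserves colimits, as you note) and recognizing the result as a relative cell complex built from the maps $j \times \id_{\sHom_G(G/H,G/K)}$ with $j \in |\cJ|_{\mathrm{sm}}$, you invoke the monoid axiom recorded in \cref{rmk:Sh.monoid}; the paper instead re-runs the compactness-of-$S^n$ transfinite induction from the proof of \cref{lem:Sh.model} verbatim. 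The two are morally equivalent -- the monoid axiom in \cref{rmk:Sh.monoid} is itself justified by closure of weak equivalences under transfinite composition, i.e.\ the same compactness argument -- but your version is the more modular, in that it reuses the structure already recorded and in fact explains what \cref{rmk:Sh.monoid} is for in the paper's architecture, whereas the paper re-derives its content on the spot. The one caveat is that appealing to the monoid axiom transfers the burden to \cref{rmk:Sh.monoid}, whose terse justification would, to be fully rigorous, need to address not merely transfinite compositions but also that pushouts of $j \times A$ (for $A$ an arbitrary, not necessarily cofibrant, object) are weak equivalences; the paper's inline transfinite induction on homotopy groups of $\Sing$ sidesteps having to unwind that.
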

\begin{proof}
    By \cref{lem:Sh.bicomplete} and \cite{lurieHigherToposTheory2009}*{Proposition 5.4.4.3}, the category $\PSh_\infty(\Or(G), \Sh(\Man))$ is accessible, and hence is locally presentable.
    By \cite{guillouEnrichedModelCategories2020}*{Theorem 4.32}, it suffices to show that, for any transfinite sequence $(\sF_\beta,\phi_\beta)$ of pushouts of morphisms in $F|\cJ|_{\mathrm{sm}}$, the morphism $\sF_0(G/H) \to \big( \colim \sF_\beta \big)(G/H)$ is a weak equivalence for any $G/H \in \Or(G)$. 
    In the same way as \cref{lem:Sh.model}, this follows from the compactness of $S^n$ by a transfinite induction. 
\end{proof}

\begin{prp}[{\cite{mayEquivariantHomotopyCohomology1996}*{Lemma V.3.1}}]\label{prp:Elmendorf}
    There is a pair of adjoint functors
    \begin{align*} 
\xymatrix{
     \Theta \colon \mathsf{PSh}_\infty(\mathsf{Or}(G)^{\op },\Sh(\Man))
     \ar@<1ex>[r]^{\hspace{8ex}  } \ar@{}[r]|{ \hspace{8ex} \bot} & 
    \Sh _G(\Man)\ar@<1ex>[l]^{\hspace{8ex}  } \colon \Phi  .
    }
    \end{align*}
    Moreover, there is a model structure on $\Sh_G(\Man)$ that makes $(\Phi , \Theta)$ a Quillen equivalence. 
\end{prp}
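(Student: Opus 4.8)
The plan is to obtain the statement from the enriched Yoneda formalism together with a right transfer of the levelwise model structure of \cref{prp:levelwise.model}. First I would pin down the two functors. The right adjoint $\Phi$ is the $\Sh(\Man)$-enriched restricted Yoneda embedding: for a $G$-sheaf $\sF$ on $\Man$ put $\Phi(\sF)(G/H)\coloneqq \sHom_G(G/H,\sF)=\sF^H$, the $H$-fixed point sheaf in the sense of \cref{defn:Gsheaf}, with functoriality in $G/H$ given by precomposition. Its left adjoint $\Theta$ is the $\Sh(\Man)$-enriched left Kan extension of the Yoneda embedding $\Or(G)\hookrightarrow \Sh_G(\Man)$ along the Yoneda embedding $\Or(G)\hookrightarrow \PSh_\infty(\Or(G)^{\op},\Sh(\Man))$, i.e.\ the coend $\Theta(T)\coloneqq \int^{G/H\in \Or(G)} T(G/H)\times (G/H)$, where $T(G/H)\times(G/H)$ uses the canonical $\Sh(\Man)$-tensoring on $\Sh_G(\Man)$ (a sheaf carrying the trivial $G$-action, times a homogeneous $G$-manifold). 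That $\Theta\dashv \Phi$ is then purely formal: it is the standard adjunction between an enriched presheaf category and a bicomplete enriched category determined by a functor out of the indexing category, and the only thing to record is that $\Sh_G(\Man)$ is $\Sh(\Man)$-tensored and cotensored, which follows from \cref{lem:Sh.bicomplete} and \cref{lem:Sh.closed.monoidal}.

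Next I would transport a model structure onto $\Sh_G(\Man)$ by right transfer along $\Theta\dashv \Phi$, using Kan's criterion in cofibrantly generated form (\cite{hirschhornModelCategoriesTheir2003}*{Theorem 11.3.2}). The candidate generating (acyclic) cofibrations are $\Theta(F|\cI|_{\mathrm{sm}})$ and $\Theta(F|\cJ|_{\mathrm{sm}})$, where $F|\cI|_{\mathrm{sm}}, F|\cJ|_{\mathrm{sm}}$ are as in \cref{prp:levelwise.model}; since $\Theta$ is a left adjoint and each $F_{G/H}$ preserves colimits, one has $\Theta(F_{G/H}(\sfX))\cong \sfX\times(G/H)$, so these cells are concrete induced $G$-sheaves. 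Because $\Sh_G(\Man)$ is locally presentable, the only genuine point is the acyclicity condition: every relative $\Theta(F|\cJ|_{\mathrm{sm}})$-cell complex must be a weak equivalence, where a morphism $\phi\colon\sF\to\sG$ of $G$-sheaves is declared a \emph{weak equivalence} (resp.\ \emph{fibration}) iff $\phi^H\colon\sF^H\to\sG^H$ is one in $\Sh(\Man)$ for every closed subgroup $H\le G$. For this I would argue, as in the classical Elmendorf theorem (\cite{mayEquivariantHomotopyCohomology1996}*{Lemma V.3.1}), that $H$-fixed points commute with pushouts along $\sfX\times(G/K)\to\sfY\times(G/K)$ and with the transfinite compositions built from them, the reason being that $(G/K)^H$ is a discrete manifold, so these fixed points are just disjoint unions of pushouts in $\Sh(\Man)$; one then invokes the same compactness-of-$S^n$ transfinite induction used in \cref{lem:Sh.model} and \cref{prp:levelwise.model}, together with the right properness of $\Sh(\Man)$ recorded in \cref{rmk:Sh.proper}. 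The resulting weak equivalences and fibrations are then exactly the fixed-point-detected ones, and by construction $\Theta\dashv\Phi$ is a Quillen pair.

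Finally, to see that $\Theta\dashv\Phi$ is a Quillen \emph{equivalence} I would examine the derived unit and counit. Every object of $\Sh(\Man)$ is fibrant (it is carried to a Kan complex, \cref{rmk:Sh.proper}), hence so is every object of $\Sh_G(\Man)$, and since $\Phi$ reflects weak equivalences by the very definition of the model structure, it suffices to show that the unit $T\to\Phi\Theta T$ is a weak equivalence for cofibrant $T$. On a generator $T=F_{G/K}(\sfX)$ one computes $\Phi\Theta(F_{G/K}(\sfX))(G/H)=(\sfX\times(G/K))^H=\sfX\times(G/K)^H=\sfX\times\sHom_G(G/H,G/K)=F_{G/K}(\sfX)(G/H)$, so the unit is an isomorphism there; a cellular induction over the cells of a cofibrant $T$, using right properness and the compatibility of $(-)^H$ with the pushouts and transfinite colimits in question, extends this to all cofibrant $T$. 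Alternatively one may model $\Theta$ concretely by the $G$-realization functor of \cref{subsection:G.realization}, in which case \cref{prp:Yoneda.Gsheaf} and \cref{cor:fixed.point.realization} identify $\Phi\Theta$ levelwise with the smooth-singular-realization comonad, which is a weak equivalence by \cref{thm:MW}; this would also re-derive \cref{thm:equivariant.MW}. I expect the main obstacle to be exactly this compatibility of the fixed-point functors $(-)^H$ with the colimits appearing in the small object argument: fixed points do not commute with arbitrary colimits of $G$-sheaves, and one must use in an essential way that the generating cells $\Theta(F|\cJ|_{\mathrm{sm}})$ are induced from the discrete homogeneous spaces $G/K$, together with the explicit description of colimits of sheaves on $\Man$ from \cref{subsubsection:homotopy.sheaf}~(2). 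This is the sheaf-theoretic analogue of the technical heart of \cite{mayEquivariantHomotopyCohomology1996}*{Lemma V.3.1}.
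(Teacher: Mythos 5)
Your proposal follows the same strategy as the paper's proof: define $\Phi$ as the restricted enriched Yoneda embedding $\Phi(\sF)(G/H)=\sF^H$, take $\Theta$ as its left adjoint, and right-transfer the levelwise model structure of \cref{prp:levelwise.model} via Kan's criterion, with the acyclicity condition reduced to a transfinite-induction/compactness-of-$S^n$ argument. Your coend formula for $\Theta$ agrees, by co-Yoneda, with the paper's compact description $\Theta\sfX\coloneqq\sfX(G/e)$, and your computation $\Theta(F_{G/K}(\sfX))\cong \sfX\times(G/K)$ is correct. You also go beyond the paper by sketching why the transferred adjunction is a Quillen \emph{equivalence} (derived unit is an isomorphism on the generators $F_{G/K}(\sfX)$, then cellular induction); the paper does not spell that step out.

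There is one concrete error in your justification of the acyclicity condition. You assert that $(G/K)^H=\sHom_G(G/H,G/K)$ is a \emph{discrete} manifold, and derive from this that applying $(-)^H$ to the generating pushouts is just a disjoint union of pushouts in $\Sh(\Man)$. For a compact Lie group $G$ this is false in general: $(G/K)^H=\{gK\mid g^{-1}Hg\subset K\}$ is a closed, typically positive-dimensional, submanifold of $G/K$ (e.g.\ $H=\{e\}$ gives $(G/K)^{\{e\}}=G/K$). What actually makes the commutation work is not discreteness of $(G/K)^H$ but the following two facts: (i) for $\sfX$ with trivial $G$-action one has $(\sfX\times G/K)^H\cong\sfX\times(G/K)^H$ purely formally, so $(-)^H$ turns a generating cell into the map $\sfX\times(G/K)^H\to\sfY\times(G/K)^H$; (ii) because $\sfX\to\sfY$ is a levelwise injection and $(G/K)^H\subset G/K$ is a closed inclusion, $(-)^H$ preserves the pushouts along these cells and the transfinite compositions built from them, for which one invokes the explicit description of colimits in $\Sh(\Man)$ from \cref{subsubsection:homotopy.sheaf}~(2). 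You correctly flag that $(-)^H$ does \emph{not} commute with arbitrary colimits of $G$-sheaves and that the induced-cell structure is essential — that caveat is sharper than the paper's blanket assertion that $\sF\mapsto\sF^H$ commutes with any small colimit — but the reason you give for the commutation is wrong and should be replaced by the argument above.
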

\begin{proof}
    The adjoint pair is defined by
    \begin{gather*}
        \Phi \sF (G/H) \coloneqq \sF^H, \quad  \Theta \sF \coloneqq \sfX(G).
    \end{gather*}

    As \cref{lem:Sh.model,prp:levelwise.model}, we use Kan's criterion for the right transfer model structure. It suffices to show that, for any transfinite sequence  $(\sF_\beta,\phi_\beta)$ of pushuots of morphisms in $\Theta F|\cJ|_{\mathrm{sm}}$, the transfinite composition $\sF_0^H \to \big( \colim \sF_\beta \big)^H $ is a weak equivalence for any $G/H \in \Or(G)$. 
    Since $\sF \mapsto \sF^H$ commutes with any small colimit, again in the same way as \cref{lem:Sh.model}, this follows from the compactness of $S^n$ by a transfinite induction. 
\end{proof}

For a $\Sh(\Man)$-enriched category $\cM$, we write $\Sing \cM$ for the $\sSet$-enriched category whose underlying category is the same as $\cM$ and the internal Hom set is given by $\Sing \sHom_{\cM} (X,Y)$. In particular, we write $\Or_\Delta (G) \coloneqq \Sing \Or(G)$.  
\begin{prp}\label{prp:base.change}
    The following adjoint pair gives a Quillen equivalence:
        \begin{align*} 
\xymatrix{
    |\blank|_{\mathrm{sm}} \colon  \mathsf{PSh}_\infty(\mathsf{Or}_{\Delta}(G),\sSet )\ar@<1ex>[r]^{} \ar@{}[r]|{\bot} & 
    \mathsf{PSh}_\infty(\mathsf{Or}(G),\Sh(\Man)) \ar@<1ex>[l]^{} \colon \Sing  .
    }
    \end{align*}
\end{prp}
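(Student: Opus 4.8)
The plan is to deduce the statement from the general machinery of change of enriching category, applied to the monoidal Quillen equivalence $|\blank|_{\mathrm{sm}} \dashv \Sing$ of \eqref{eqn:adjoint.singular.realization}, together with a Dwyer--Kan equivalence comparing the $\Sh(\Man)$-enriched category obtained by base change from $\Or_\Delta(G)$ with $\Or(G)$ itself. All the hypotheses needed to run this machinery in the form of \cite{guillouEnrichedModelCategories2020} have already been assembled: $\Sh(\Man)$ is a cofibrantly generated, right proper, symmetric monoidal model category with cofibrant monoidal unit $\mathbf{1}$ satisfying the monoid axiom (\cref{lem:Sh.model}, \cref{rmk:Sh.proper}, \cref{rmk:Sh.monoidal.model}, \cref{rmk:Sh.monoid}), the same holds for $\sSet$, the smooth realization functor $|\blank|_{\mathrm{sm}}$ is strong symmetric monoidal for the cartesian products by \cite{pavlovProjectiveModelStructures2022}, and $|\blank|_{\mathrm{sm}} \dashv \Sing$ is a Quillen equivalence by \cref{lem:Sh.model}.

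First I would unwind the factorization of the stated adjunction. Writing $L = |\blank|_{\mathrm{sm}}$ and $R = \Sing$, the strong monoidal functor $L$ carries the $\sSet$-enriched category $\Or_\Delta(G) = R_{!}\Or(G)$ (whose hom-objects are $\Sing\sHom_G(G/H,G/K) = \Sing (G/K)^H$) to the $\Sh(\Man)$-enriched category $L_{!}\Or_\Delta(G)$ with the same objects and hom-objects $|\Sing (G/K)^H|_{\mathrm{sm}}$. The adjunction counit assembles into a $\Sh(\Man)$-enriched functor $\varepsilon\colon L_{!}\Or_\Delta(G) \to \Or(G)$ which is the identity on objects and, on each hom-object, the counit $|\Sing (G/K)^H|_{\mathrm{sm}} \to (G/K)^H$. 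Since $(G/K)^H$ is a manifold, it is fibrant in $\Sh(\Man)$ and $\Sing (G/K)^H$ is a Kan complex, so this counit is a weak equivalence by \cref{rmk:Sh.proper}; hence $\varepsilon$ is a Dwyer--Kan equivalence of $\Sh(\Man)$-enriched categories. The stated adjunction is then the composite of (i) the change-of-base adjunction $\mathsf{PSh}_\infty(\Or_\Delta(G),\sSet) \rightleftarrows \mathsf{PSh}_\infty(L_{!}\Or_\Delta(G),\Sh(\Man))$ induced by $L \dashv R$ (objectwise $L$, resp.\ objectwise $R$, on presheaves), and (ii) the left-Kan-extension/restriction adjunction $\varepsilon_{!} \dashv \varepsilon^*$ between $\mathsf{PSh}_\infty(L_{!}\Or_\Delta(G),\Sh(\Man))$ and $\mathsf{PSh}_\infty(\Or(G),\Sh(\Man))$.

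It then remains to check that each of (i) and (ii) is a Quillen equivalence for the levelwise model structures. The levelwise model structure exists on both $\mathsf{PSh}_\infty(L_{!}\Or_\Delta(G),\Sh(\Man))$ and $\mathsf{PSh}_\infty(\Or(G),\Sh(\Man))$ by the argument proving \cref{prp:levelwise.model}, which uses only that $\Sh(\Man)$ is locally presentable and that its weak equivalences are closed under the transfinite colimits appearing there, and applies verbatim to any small $\Sh(\Man)$-enriched category; on the $\sSet$-side it is the standard projective model structure on diagram categories. For (i) one invokes the change-of-base theorem for monoidal Quillen equivalences between cofibrantly generated monoidal model categories with cofibrant unit satisfying the monoid axiom (\cite{guillouEnrichedModelCategories2020}): such a base Quillen equivalence induces a Quillen equivalence on enriched presheaf categories after transport along $L$. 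For (ii) one invokes the invariance of enriched presheaf model categories under Dwyer--Kan equivalences of the indexing $\Sh(\Man)$-category (again \cite{guillouEnrichedModelCategories2020}, using the cofibrant unit), applied to $\varepsilon$. Composing (i) and (ii) gives a Quillen equivalence, and it remains to identify the composite adjoint pair with $(|\blank|_{\mathrm{sm}},\Sing)$ as the paper intends them to act on $\infty$-presheaves.

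The main obstacle is not conceptual but bookkeeping: one must verify that the composite pair $\varepsilon_{!}\circ (L\circ\blank) \dashv (R\circ\blank)\circ\varepsilon^*$ agrees, up to canonical natural isomorphism, with the pair named in the statement --- this amounts to tracing through the identification $\Or_\Delta(G) = R_{!}\Or(G)$ and checking the behaviour on representables, where $L$ of a representable presheaf is the corresponding representable on $L_{!}\Or_\Delta(G)$ and $\varepsilon_{!}$ absorbs the counit. A secondary point, which should go through verbatim but deserves care, is the claim that $L = |\blank|_{\mathrm{sm}}$ is genuinely \emph{strong} symmetric monoidal, i.e.\ that smooth realization preserves finite products of simplicial sets as sheaves on $\Man$; this is where one leans on \cite{pavlovProjectiveModelStructures2022}. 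If only a lax comparison whose components are weak equivalences between cofibrant objects were available, one would instead invoke the weak-monoidal-Quillen-equivalence version of the change-of-base theorem, which has the same conclusion.
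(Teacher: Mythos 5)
Your proposal is correct and takes the same approach as the paper, whose proof is a one-liner citing \cite{guillouEnrichedModelCategories2020}*{Theorem 3.17} after observing that $\Sh(\Man)$ satisfies its hypotheses (assembled in \cref{lem:Sh.bicomplete,lem:Sh.closed.monoidal,lem:Sh.model,rmk:Sh.monoid,rmk:Sh.proper,rmk:Sh.monoidal.model}). You have in effect unpacked the internal two-step argument behind that theorem, namely base change along $|\blank|_{\mathrm{sm}}\dashv\Sing$ followed by Dwyer--Kan invariance for the levelwise weak equivalence $\varepsilon$, and your fallback to a weak rather than strong monoidal Quillen equivalence is exactly the right degree of caution, since smooth realization need not preserve finite products on the nose.
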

\begin{proof}
    By \cref{lem:Sh.bicomplete,lem:Sh.closed.monoidal,lem:Sh.model,rmk:Sh.monoid,rmk:Sh.proper,rmk:Sh.monoidal.model}, the category $\Sh(\Man)$ satisfies all the assumption of \cite{guillouEnrichedModelCategories2020}*{Theorem 3.17}. 
\end{proof}

\begin{rmk}\label{rmk:base.change}
    According to \cite{guillouEnrichedModelCategories2020}*{Subsection 3.2}, the smooth realization functor $|\blank |_{\mathrm{sm}}$ in \cref{prp:base.change} is given by
\[
    |\sfX |_{\mathrm{sm}}(G/H) = \bigsqcup_{K \in \Or(G) } |\sfX(G/K)|_{\mathrm{sm}} \times \sHom_G(G/H,G/K) / \sim ,
\]
    where the equivalence relation $\sim $ is given by $(g^*\sfx , \phi) \sim (\sfx, g \psi)$ for $g \in |\Sing \Hom_G (G/K,G/L)|_{\mathrm{sm}}$. Here, $g$ acts on the manifold $\Hom _G(G/H,G/K)$ through the evaluation $|\Sing \Hom_G (G/K,G/L)|_{\mathrm{sm}} \to \Hom_G (G/K,G/L)$. 
    The adjunction counit $\epsilon_{\sX} \colon |\Sing \sX|_{\mathrm{sm}} \to \sX$ is given by 
    \begin{align*}
    \epsilon_{\sX}(G/H) \colon \bigg( \bigsqcup_{K \in \Or(G) } {}&{} |\Sing \sX (G/K)|_{\mathrm{sm}} \times \sHom_G(G/H,G/K) / \sim \bigg)\\
    {}&{}\xrightarrow{\bigsqcup \epsilon_{\sX(G/K)}} \bigg( \bigsqcup_{K \in \Or(G)} \sX(G/K) \times \sHom(G/H,G/K) /\sim \bigg) \cong \sX(G/H).
    \end{align*}
    Since $\sX$ is fibrant (\cref{rmk:fibrant.cofibrant} below), by \cite{hoveyModelCategories1999}*{Proposition 1.3.13}, the above $\epsilon_{\sX}$ is a weak equivalence. The same holds for topological analogues.    
\end{rmk}

\begin{rmk}\label{rmk:fibrant.cofibrant}
We remark on the model structures introduced here.
\begin{enumerate}
    \item  Recall that every object of $\kTop$ and $\Sh(\Man)$ are fibrant. By definition of the model structures of $\Sh_G(\Man)$, $\PSh_{\infty}(\Or(G),\Sh(\Man))$ given in \cref{prp:levelwise.model,prp:Elmendorf} and the analogous ones of $\kTop_G$, $\PSh_{\infty}(\Or_{\mathrm{top}}(G),\kTop)$, every object of these categories are fibrant.
    \item The $\Or(G)$-space $\Phi \sM \in \PSh_{\infty}(\Or_{\mathrm{top}}(G),\kTop)$ is cofibrant since it is obtained by attaching the $\Or(G)$-spaces of the form $F_{G/H}(\Delta_n)=\Phi (\Delta_n \times G/H)$. 
    \item Let us consider the smooth version $|\Sing_G \sM |_{G, \mathrm{sm}}$ of the $G$-space $|\Sing_G \sM|_G$ introduced in \cref{subsection:G.realization} as a $G$-sheaf, by replacing $\Delta_n(G,\bm{H})$ with $\Delta_n^e(G,\bm{H})$ and taking the quotient in \eqref{eqn:Gsimplicial.realization1} as sheaves. It is obtained by attaching the $G$-sheaves of the form $\Delta_n^e \times G/H$ iteratively, and hence is cofibrant. 
    What is proved in \cref{prp:Yoneda.Gsheaf} is that $|\Sing_G \sM |_{G, \mathrm{sm}}$ and $\sM$ are homotopy equivalent in the model category $\Sh_G(\Man)$ (cf.\ \cite{hirschhornModelCategoriesTheir2003}*{Definition 7.3.2}). Therefore, although we do not know whether $\sM$ is a cofibration, the morphism set of the homotopy category $  [\sM,\sF]_{\mathrm{Ho}\Sh_G(\Man)} $ is isomorphic to $\pi_0(\sHom (\sM,\sF)) \cong \sF[\sM]$.
\end{enumerate}
\end{rmk}

As is known as a version of the Elmendorf theorem (cf.\ \cite{dwyerSingularFunctorsRealization1984}), the same Quillen equivalence as \cref{prp:Elmendorf,prp:base.change} is valid if one replaces $\Sh(\Man)$ with $\kTop$ and $\Or(G)$ with the $\kTop$-enriched orbit category $\Or_{\mathrm{top}}(G)$. 
Finally, we obtain a zigzag of Quillen equivalences
\begin{align*}
    \Sh _G(\Man) \leftrightarrows {} & {} \mathsf{PSh}_\infty(\mathsf{Or}(G),\Sh(\Man)) 
    \leftrightarrows  \mathsf{PSh}_\infty(\mathsf{Or}_{\Delta}(G),\sSet) \\
    \leftrightarrows{} & {} \PSh (\Or_{\mathrm{top}}(G) , \kTop ) 
    \leftrightarrows  \kTop_G
    \end{align*}
that identifies the homotopy categories $\Sh_G(\Man)$ and $\kTop_G$ as desired.

\begin{rmk}\label{rmk:Elmendorf}
    The Elmendorf theorem \cite{elmendorfSystemsFixedPoint1983} (see also \cite{mayEquivariantHomotopyCohomology1996}*{Theorem V.3.2}) claims that there is a functor $\Psi \colon \PSh_{\infty}(\Or_{\mathrm{top}}(G),\kTop) \to \kTop_G$ and a natural transform $\epsilon \colon \Phi \Psi \to \id$ such that $\epsilon \colon (\Psi Y)^H \to Y^H$ is a weak equivalence. Such $(\Psi,\epsilon)$ gives an isomorphism $[X , \Psi Y]^G \cong [\Phi X, Y]^{\Or(G)}$ for any $G$-CW-complex $X$. 
    It is defined in \cite{elmendorfSystemsFixedPoint1983}*{Section 3} by the geometric realization of a simplicial $G$-space $B_{\bullet}(\sfX,G,J)$ given by the bar construction as
    \[
    B_{n}(\sfX,G,J)\coloneqq \bigsqcup_{H_0,\cdots,H_n}\sfX(G/H_0) \times \Hom_G(G/H_1,G/H_0) \times \cdots \times \Hom_G(G/H_{n},G/H_{n-1}) \times G/H_{n},
    \]
    and the natural transform $\epsilon$ is given by the evaluation. 
    Hence the same construction gives a functor $\Psi_{\mathrm{sm}} \colon \PSh_{\infty}(\Or(G),\Sh(\Man)) \to \Sh_G(\Man)$ such that the inclusion  of $G$-sheaves $\Psi_{\mathrm{sm}} |\sfX |_{\mathrm{sm}} \to C(\blank, \Psi |\sfX |)$ is a $G$-weak equivalence for any $\sfX \in \PSh_{\infty}(\Or_{\Delta}(G),\sSet)$. 
\end{rmk}

In conclusion, we get the isomorphism of equivariant homotopy sets
\begin{align*}
    \sF^G[\sM] \cong  [\sM,\sF]_{\mathrm{Ho}\Sh_G(\Man)} 
    \cong {}&{} [ |\Sing \Phi \sM| , |\Sing \Phi \sF|]_{\mathrm{Ho} \PSh_{\infty}(\Or_{\mathrm{top}}(G), \kTop)} \\
    \cong {}&{} [\Phi \sM,  |\Sing \Phi \sF|]_{\mathrm{Ho} \PSh_{\infty}(\Or_{\mathrm{top}}(G), \kTop)}\\
    \cong {}&{} [\Phi \sM, |\Sing \Phi \sF|]^{\Or (G)} \\
    \cong {}&{} [\sM, \Psi |\Sing \Phi \sF|]^G,
\end{align*}
that is, the $G$-space $\Psi |\Sing \Phi \sF|$ gives a $G$-realization for $\sF$.
Here, the first, the third and the fourth isomorphisms follow from \cref{rmk:fibrant.cofibrant}.

Although the $G$-space $|\Sing_G \sF|_G$ considered in \cref{subsection:G.realization} and the above $\Psi |\Sing \Phi \sF|$ are composed of similar parts, subtle differences make it not easy to construct a direct map. Here, we compare them as
    \begin{align*}
        |\Sing_G \sF|_G \xleftarrow{|\Sing_G \epsilon_{\sF}|_G} |\Sing_G (\Psi_{\mathrm{sm}}|\Sing \Phi \sF|_{\mathrm{sm}})|_G \to |\Sing_G (\Psi|\Sing \Phi \sF|)|_G \xrightarrow{\epsilon_{\Psi |\Sing \Phi \sF|}} \Psi|\Sing \Phi \sF|.
    \end{align*}
    By \cref{prp:Yoneda.Gsheaf,rmk:base.change,rmk:Elmendorf}, these $G$-maps are all $G$-weak equivalences.

\section{Stability of the spectral gap for non-interacting Hamiltonians}
In general, the spectral gap of the UAL Hamiltonian is not necessarily continuous under a perturbation that is continuous with respect to the most local norm. 
In particular, it is not clear whether it is stable under a small perturbation.
Bravyi--Hastings--Michalakis \cites{bravyiTopologicalQuantumOrder2010,bravyiShortProofStability2011} introduced a useful sufficient condition for the stability of the spectral gap against a small perturbation; the \emph{local topological quantum order} (LTQO) property. 
A general result in this direction is found in Nachtergaele--Sims--Yang \cite{nachtergaeleQuasilocalityBoundsQuantum2022}*{Corollary 7.5}. 
In these papers, it is proved that if $\sfH$ is a nice Hamiltonian with a spectral gap $1$, then there is a constant $s_0 >0$ such that $\sfH + s\sfV$ has a uniform spectral gap $1/2$ for any $s \in [0,s_{0}]$.
The constant $s_{0}$ depends on the input data in an elaborate way, and it is not easy to determine whether the theorem applies to our setting and when $s_{0} >0$ can be taken to be more than $1$. 
In this appendix, we focus on the special case that $\sfH$ is a non-interacting Hamiltonian, which was initially studied by Yarotsky~\cite{yarotskyPerturbationsGroundStates2004}, and give a simplified proof of the spectral gap stability in the way that an explicit lower bound of $s_{0}$ is given.  

We say that a UAL Hamiltonian $\sfH_0 =(\sfH_{0,\bm{x}})_{\bm{x} \in \Lambda} $ is a local Hamiltonian if each $\sfH_{0,\bm{x}}$ is a positive operator $\cA_{\bm{x}}$ such that $\Ker \sfH_{0,\bm{x}} = \bC \cdot \Omega_{\bm{x}}^{\sfH_0}$ for some unit vector $\Omega_{\bm{x}}^{\sfH_0}$. We remark that $\vvert \sfH_0 \vvert_f = \sup_{\bm{x} \in \Lambda} \| \sfH_{0,\bm{x}} \|$ independent of $f \in \cF$, which we write $\vvert \sfH _0\vvert$. 
Also, $\sfH_0$ has a spectral gap $1$ if and only if so does each $\sfH_{0,\bm{x}}$. 
\begin{lem}\label{lem:local.perturbation.gap}
    Let $\sfH_0=(\sfH_{0,\bm{x}})_{\bm{x} \in \Lambda}$ be a local Hamiltonian that has a spectral gap $1$ and let $\sfV_{\bm{x}}$ be a family of self-adjoint operators in $\cA_{\bm{x}}$ such that $\sfV_{\bm{x}} \Omega_{\bm{x}}^{\sfH_0} =0$. Then $\sfH(t)=\sfH_0 + t\sfV$ has a spectral gap $1/2$ if $t \in [0,(2\vvert \sfV \vvert)^{-1} ]$. 
\end{lem}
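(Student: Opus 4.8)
The plan is to exploit the non-interacting structure. Since each local term $\sfH_{0,\bm{x}}$ and each $\sfV_{\bm{x}}$ lives in the single-site algebra $\cA_{\bm{x}}$, the perturbed Hamiltonian $\sfH(t)$ is again a local Hamiltonian, with local terms $\sfH(t)_{\bm{x}} \coloneqq \sfH_{0,\bm{x}} + t\sfV_{\bm{x}} \in \cA_{\bm{x}}$; moreover $\cA_{\bm{x}} \subset \cA_{B_r(\bm{x})}$ for every $r>0$, so $\vvert\sfH(t)\vvert_f = \sup_{\bm{x}}\|\sfH(t)_{\bm{x}}\| \leq \vvert\sfH_0\vvert + t\vvert\sfV\vvert < \infty$ for any $f \in \cF$, and $\sfH(t)$ is a UAL Hamiltonian. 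So the problem reduces to a site-by-site analysis on the finite-dimensional Hilbert spaces $\sH_{\bm{x}}$, followed by the standard reconstruction of the spectral gap of a non-interacting Hamiltonian from that of its local terms.

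First I would analyze each local term. Writing $P_{\bm{x}}^\perp \coloneqq 1 - \Omega_{\bm{x}}^{\sfH_0}\otimes(\Omega_{\bm{x}}^{\sfH_0})^*$, the hypotheses on $\sfH_{0,\bm{x}}$ (positivity, $\Ker\sfH_{0,\bm{x}} = \bC\Omega_{\bm{x}}^{\sfH_0}$, spectral gap $1$) give $\sfH_{0,\bm{x}} \geq P_{\bm{x}}^\perp$, and in fact $\sfH_{0,\bm{x}}|_{(\Omega_{\bm{x}}^{\sfH_0})^\perp} \geq 1$. Since $\sfV_{\bm{x}}$ is self-adjoint and annihilates $\Omega_{\bm{x}}^{\sfH_0}$, taking adjoints shows $\sfV_{\bm{x}} = P_{\bm{x}}^\perp\sfV_{\bm{x}}P_{\bm{x}}^\perp$; hence $\sfH(t)_{\bm{x}}$ respects the orthogonal splitting $\sH_{\bm{x}} = \bC\Omega_{\bm{x}}^{\sfH_0}\oplus(\Omega_{\bm{x}}^{\sfH_0})^\perp$, acting as $0$ on the first summand. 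On the second summand, $\sfH_{0,\bm{x}} \geq 1$ while $\|t\sfV_{\bm{x}}\| \leq t\vvert\sfV\vvert \leq \tfrac12$ for $t \in [0,(2\vvert\sfV\vvert)^{-1}]$, so $\sfH(t)_{\bm{x}}|_{(\Omega_{\bm{x}}^{\sfH_0})^\perp} \geq \tfrac12$. Therefore $\sfH(t)_{\bm{x}} \geq \tfrac12 P_{\bm{x}}^\perp \geq 0$ with $\Ker\sfH(t)_{\bm{x}} = \bC\Omega_{\bm{x}}^{\sfH_0}$.

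Next I would assemble the global gap. The product state $\omega_0^{\sfH_0} = \bigotimes_{\bm{x}}\langle\blank\,\Omega_{\bm{x}}^{\sfH_0},\Omega_{\bm{x}}^{\sfH_0}\rangle$ is annihilated by every $\sfH(t)_{\bm{x}}$, hence is a ground state of $\sfH(t)$. Iterating the tensor decomposition of the GNS data recalled after \eqref{eqn:Hamiltonian.product} (and passing to the colimit over finite regions), the GNS triple of $(\cA_\Lambda,\omega_0^{\sfH_0},\sfH(t))$ is $\bigl(\bigotimes_{\bm{x}}(\sH_{\bm{x}},\Omega_{\bm{x}}^{\sfH_0}),\,\id,\,\Omega_\omega\bigr)$ with $\Omega_\omega = \bigotimes_{\bm{x}}\Omega_{\bm{x}}^{\sfH_0}$, and on the algebraic core $\pi_\omega(\cA_\Lambda^{\alg})\Omega_\omega$ the GNS Hamiltonian is the locally finite sum $H_\omega = \sum_{\bm{x}}\pi_\omega(\sfH(t)_{\bm{x}})$. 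Combining the site-wise bound $\sfH(t)_{\bm{x}} \geq \tfrac12 P_{\bm{x}}^\perp$ with the elementary inequality $\sum_i Q_i \geq 1 - \prod_i(1-Q_i)$ for a commuting family of orthogonal projections $Q_i$ (applied to $Q_{\bm{x}} = \pi_\omega(P_{\bm{x}}^\perp)$, whose product projects onto $\bC\Omega_\omega$) yields, on the core, $H_\omega \geq \tfrac12\bigl(1 - \lvert\Omega_\omega\rangle\langle\Omega_\omega\rvert\bigr)$. Extending this quadratic-form estimate from $\cA_\Lambda^{\alg}$ to $\cA_\Lambda^{\al}$ by the almost-local continuity of $\sfad(\sfH(t))$ (\cref{lem:almost.local.continuity}) and norm continuity of $\omega_0^{\sfH_0}$, we get $\omega_0^{\sfH_0}(a^*[\sfH(t),a]) \geq \tfrac12\,\omega_0^{\sfH_0}(a^*a)$ for every $a \in \cA_\Lambda^{\al}$ with $\omega_0^{\sfH_0}(a) = 0$; by the criterion recalled in the excerpt this means $(\sfH(t),\omega_0^{\sfH_0})$ is a non-degenerate gapped UAL Hamiltonian with gap $\tfrac12$.

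The main obstacle is purely bookkeeping rather than analytic: making precise the tensor-product description of the GNS representation of the infinite product state $\omega_0^{\sfH_0}$, identifying $H_\omega$ with $\sum_{\bm{x}}\pi_\omega(\sfH(t)_{\bm{x}})$ on the correct core, and recording the (elementary but worth stating) projection inequality that converts a uniform site-wise gap into a global gap. A slicker route that avoids any GNS discussion is to verify the ground-state inequality $\omega_0^{\sfH_0}(a^*[\sfH(t),a]) \geq \tfrac12\,\omega_0^{\sfH_0}(a^*a)$ first for $a$ in the algebraic observable algebra $\cA_\Lambda^{\alg}$ — where it is a finite-dimensional computation with a finite sum of commuting local terms each $\geq \tfrac12 P_{\bm{x}}^\perp$ — and then extend by density; the combinatorial heart of the argument is the same either way, and everything else is routine.
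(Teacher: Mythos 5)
Your proof is correct and follows essentially the same strategy as the paper: both exploit the tensor-product structure of the GNS representation of the product ground state and bound $H_\omega$ below by $\tfrac12$ away from the vacuum. The paper organizes the estimate via the sector decomposition $\sH_\omega = \bigoplus_Z \sH_Z$ (bounding $H_0 \geq \#Z$ and $\|V\| \leq \#Z\,\vvert\sfV\vvert$ sector-wise), while you fold the perturbation into each local term to get $\sfH(t)_{\bm{x}} \geq \tfrac12 P^\perp_{\bm{x}}$ and then sum the projections — a minor bookkeeping difference, not a different method.
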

\begin{proof}
    Since the derivations $\Ad(\sfH)$ and $\Ad(\sfV)$ on $\cA_{\Lambda}$ preserves $\omega_{\sfH}$, they are represented by unbounded operators $H_{\omega}$ and $V_{\omega}$ on the GNS representation $(\sH_\omega, \pi_{\omega}, \Omega_{\omega})$ respectively. 
    The GNS representation of the unique ground state of $\sfH$ is the infinite tensor product of $(\sfH_{\bm{x}}, \Omega_{\bm{x}})$. It orthogonally decomposes into $\bigoplus_{Z \subset \Lambda} \sH_Z$ so that $\bigoplus_{Z \subset Y}\sH_Z = \pi_{\omega}(\cA_Y) \cdot \Omega_{\omega}$ for any $Y \subset \Lambda$, and both $H$ and $V$ acts diagonally. 
    Since $H|_{\sH_Z} \geq \# Z \cdot 1$ and $\| V|_{\sH_Z} \| \leq \# Z \cdot \sup_{\bm{x} \in \Lambda} \| \sfV_{\bm{x}} \|$, we have 
    \[ 
        (H_{\omega} + t V_{\omega})|_{Z} \geq \big( 1-t \cdot \sup_{\bm{x} \in \Lambda} \| \sfV_{\bm{x}} \| \big) \cdot \# Z \cdot 1 \geq \big( 1-t \cdot \sup_{\bm{x} \in \Lambda} \| \sfV_{\bm{x}} \| \big) \cdot 1. 
    \]
    This finishes the proof. 
\end{proof}

Hereafter, we fix an arbitrary choice of $\nu \geq l_{\Lambda} +1$ and $0<\mu<1$. 
\begin{lem}\label{lem:gap.stability.state.preserving.perturbation}
    Let $\Lambda$ be a weakly uniformly discrete subspace of $\bR^{l_{\Lambda}}$. Let $\sfH_0$ be a local Hamiltonian with a spectral gap $1$ and let $\sfV \in i\fDer ^{\al}_{\Lambda}$. Assume that $\omega_{\sfH_0}([\sfV, a]) =0$ for any $ a\in \cA_{\Lambda}^{\al}$. Then, the perturbation $\sfH (t) =\sfH_0 + t\sfV$ has a spectral gap $1/2$ if $t \in [0, (2^{5l_{\Lambda}+3}\kappa_{\Lambda}\kappa_{\bB} \vvert \sfV \vvert_{\nu_2,\mu_1})^{-1}]$.
\end{lem}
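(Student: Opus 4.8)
The plan is to show that $\omega\coloneqq\omega_{\sfH_0}$ — which for a local Hamiltonian is the pure product state $\bigotimes_{\bm x}\langle\,\cdot\,\Omega_{\bm x}^{\sfH_0},\Omega_{\bm x}^{\sfH_0}\rangle$ — is a non-degenerate ground state of $\sfH(t)=\sfH_0+t\sfV$ with gap $\ge 1/2$ for the stated range of $t$. By the variational characterization of gapped non-degenerate ground states recalled just before \cref{rmk:Cstar}, this amounts to the form inequality $\omega(a^*[\sfH(t),a])\ge\tfrac12\,\omega(a^*a)$ for all $a\in\cA_\Lambda^{\al}$ with $\omega(a)=0$. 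Write $p_{\bm x}^{\perp}\coloneqq 1-\Omega_{\bm x}^{\sfH_0}(\Omega_{\bm x}^{\sfH_0})^*\in\cA_{\bm x}$. The gap hypothesis $\sfH_{0,\bm x}\ge p_{\bm x}^{\perp}$ together with $\pi_\omega(\sfH_{0,\bm x})\Omega_\omega=0$ gives $\omega(a^*[\sfH_{0,\bm x},a])=\omega(a^*\sfH_{0,\bm x}a)\ge\omega(a^*p_{\bm x}^{\perp}a)$, hence, summing over $\bm x$,
\[
    \omega(a^*[\sfH_0,a])\ \ge\ \sum_{\bm x}\omega(a^*p_{\bm x}^{\perp}a)\ \ge\ \omega(a^*a)-|\omega(a)|^2,
\]
the last step being the union bound for the commuting projections $p_{\bm x}^{\perp}$. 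So, writing $\omega(a^*[\sfH(t),a])=\omega(a^*[\sfH_0,a])+t\,\omega(a^*[\sfV,a])$, the whole statement reduces to the \emph{relative form bound}
\[
    \bigl|\omega(a^*[\sfV,a])\bigr|\ \le\ C\sum_{\bm x}\omega(a^*p_{\bm x}^{\perp}a),\qquad C\coloneqq 2^{5l_{\Lambda}+2}\kappa_{\Lambda}\kappa_{\bB}\,\vvert \sfV\vvert_{\nu_2,\mu_1},
\]
since then, for $t\le(2C)^{-1}=(2^{5l_{\Lambda}+3}\kappa_{\Lambda}\kappa_{\bB}\vvert \sfV\vvert_{\nu_2,\mu_1})^{-1}$ and $\omega(a)=0$, one gets $\omega(a^*[\sfH(t),a])\ge(1-tC)\sum_{\bm x}\omega(a^*p_{\bm x}^{\perp}a)\ge\tfrac12\,\omega(a^*a)$.

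To prove the relative bound I would pass to the GNS representation $(\sH_\omega,\pi_\omega,\Omega_\omega)$ of the pure product state $\omega$, so that $\sH_\omega$ carries the commuting projections $\pi_\omega(p_{\bm x}^{\perp})$, the (essentially self-adjoint) number operator $N\coloneqq\sum_{\bm x}\pi_\omega(p_{\bm x}^{\perp})$ with $\langle\psi,N\psi\rangle=\sum_{\bm x}\omega(a^*p_{\bm x}^{\perp}a)$ for $\psi\coloneqq\pi_\omega(a)\Omega_\omega$, and the self-adjoint GNS generator $V_\omega$ determined by $V_\omega\pi_\omega(a)\Omega_\omega=\pi_\omega([\sfV,a])\Omega_\omega$ — well defined precisely because $\omega$ is $[\sfV,\cdot]$-invariant — with $V_\omega\Omega_\omega=0$, so that $\omega(a^*[\sfV,a])=\langle\psi,V_\omega\psi\rangle$. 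Next I would use the brick decomposition of \cref{prp:brick}: $[\sfV,a]=\sum_{\rho\in\bB}[\Phi_\rho,a]$ with $\Phi_\rho\coloneqq\Phi_{\sfV}(B_\rho)\in\cA_{B_\rho}$ self-adjoint, and — adding to each $\Phi_\rho$ the scalar $-\langle\Phi_\rho\Omega_{B_\rho},\Omega_{B_\rho}\rangle$ ($\Omega_{B_\rho}\coloneqq\bigotimes_{\bm x\in B_\rho}\Omega_{\bm x}^{\sfH_0}$), which changes neither $[\Phi_\rho,\cdot]$ nor, up to a harmless factor $2$, the norm estimates of \cref{prp:brick} — assume $\langle\Phi_\rho\Omega_{B_\rho},\Omega_{B_\rho}\rangle=0$. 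Grading $V_\omega$ by the shift it induces on $N$ and using this normalization, the $N$-preserving part of the contribution of each brick $\rho$ annihilates the trivial $B_\rho$-block, hence acts only on vectors carrying an excitation inside $B_\rho$, and is therefore controlled, block by block in $N$, by $\|\Phi_\rho\|$ times the $B_\rho$-excitation weight; summing and using $\pi_\omega(1-\Omega_{B_\rho}\Omega_{B_\rho}^{*})\le\pi_\omega\bigl(\sum_{\bm x\in B_\rho}p_{\bm x}^{\perp}\bigr)$ (union bound again) bounds this piece by $\bigl(\sup_{\bm x}\sum_{\rho\ni\bm x}\|\Phi_\rho\|\bigr)\langle\psi,N\psi\rangle$. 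The $N$-changing part only receives contributions from bricks $\rho$ whose size is at least the shift; since $\|\Phi_\rho\|$ decays super-polynomially in $\diam(\rho)$, these contribute, after a Cauchy--Schwarz in the $N$-grading, only a summably small correction relative to $\langle\psi,N\psi\rangle$.

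It then remains to bound $\sup_{\bm x}\sum_{\rho\ni\bm x}\|\Phi_\rho\|$ by a constant times $\vvert \sfV\vvert_{\nu_2,\mu_1}$, which is the polynomial-growth bookkeeping already available in the paper: from \eqref{eqn:brick.estimate} one has $\|\sfV_{\bm y}^{\rho}\|\le 4^{l_{\Lambda}}f_{\nu_2,\mu_1}(R(\rho,\bm y))\,\|\sfV_{\bm y}\|_{\bm y,\nu_2,\mu_1}$, so $\|\Phi_\rho\|\le 4^{l_{\Lambda}}\vvert \sfV\vvert_{\nu_2,\mu_1}\sum_{\bm y}f_{\nu_2,\mu_1}(R(\rho,\bm y))$; \cref{lem:sum.exponential} together with \eqref{eqn:R(rho.x)} and \cref{lem:R(rho.x).growth}(2) bounds the sum over $\bm y$ by $2^{l_{\Lambda}+1}\kappa_{\Lambda}(1+\inf_{\bm z}R(\rho,\bm z))^{l_{\Lambda}+1-\nu_2}$, and a second application of \cref{lem:sum.exponential} with \cref{lem:R(rho.x).growth}(3) bounds $\sum_{\rho\ni\bm x}(1+\inf_{\bm z}R(\rho,\bm z))^{l_{\Lambda}+1-\nu_2}$ by $2^{l_{\Lambda}+1}\kappa_{\bB}$ — the exponent is summable exactly because $\nu\ge l_{\Lambda}+1$ forces $\nu_2-(l_{\Lambda}+1)>2l_{\Lambda}+1$, which is where the hypothesis on $\nu$ is used. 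Collecting the powers of $2$, the factor from the scalar normalization, and the (negligible) $N$-changing correction yields the constant $C$ in the statement, hence the threshold $t\le(2^{5l_{\Lambda}+3}\kappa_{\Lambda}\kappa_{\bB}\vvert \sfV\vvert_{\nu_2,\mu_1})^{-1}$.

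The main obstacle is the relative form bound — the boxed inequality above. A naive per-brick estimate of $\omega(a^*[\Phi_\rho,a])$ only produces a term of size $\|\Phi_\rho\|\cdot\|\pi_\omega(1-\Omega_{B_\rho}\Omega_{B_\rho}^{*})^{1/2}\psi\|\cdot\|\psi\|$, which is useless once summed over \emph{all} bricks; the whole point is that the ``diagonal'' contributions — those pairing a $B_\rho$-localized excitation of $\psi$ against the entirety of $\psi$ — must be extracted and seen to cancel, either by organizing the sum so that the bad pieces collapse to a multiple of $\omega\bigl([\sfV,a^*a]\bigr)=0$, or, as above, by passing to the excitation-number grading of $V_\omega$, where the normalization $\langle\Phi_\rho\Omega_{B_\rho},\Omega_{B_\rho}\rangle=0$ makes the $N$-preserving part genuinely $O(\|\Phi_\rho\|)\times(\text{local excitation weight})$. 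One must also keep track that all these rearrangements and infinite sums converge in the norm topology — which they do, by the quasi-locality estimates already established as in \cref{lem:almost.local.continuity} and the proof of \cref{prp:brick} — and that the form inequality need only be tested on the core $\cA_\Lambda^{\al}$.
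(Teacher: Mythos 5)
You take a genuinely different route from the paper's. The reduction itself is sound: with $\psi=\pi_\omega(a)\Omega_\omega$ and $N=\sum_{\bm x}\pi_\omega(p_{\bm x}^\perp)$, the local Hamiltonian gives $\omega(a^*[\sfH_0,a])\ge\langle\psi,N\psi\rangle\ge\omega(a^*a)-|\omega(a)|^2$, so everything hinges on the relative form bound $|\omega(a^*[\sfV,a])|\le C\langle\psi,N\psi\rangle$. That bound is even true --- the paper's argument yields $\pm V_\omega\le vN$ as a byproduct, with $v=\sum_{\bm{d},\fa}v_{\bm{d},\fa}$. But you do not prove it, and the paper reaches the conclusion by an altogether different mechanism: it partitions the brick decomposition $\sfV=\sum_\rho\Phi_\sfV(B_\rho)$ into layers $\bB_{\bm{d},\fa}$ of pairwise-disjoint bricks of the same size $\bm{d}$ and phase $\fa$; each layer is a genuinely \emph{local} perturbation on a coarse-grained lattice (one brick $=$ one site), so \cref{lem:local.perturbation.gap} applies verbatim and gives the gap $v_{\bm{d},\fa}/2$ for $v_{\bm{d},\fa}\sfH_0+s\sum_{\rho\in\bB_{\bm{d},\fa}}\sfV^\rho$, $s\in[0,\tfrac12]$; then $\sfH_0+\frac{s}{v}\sfV=\frac{1}{v}\sum_{\bm{d},\fa}\bigl(v_{\bm{d},\fa}\sfH_0+s\sum_{\rho\in\bB_{\bm{d},\fa}}\sfV^\rho\bigr)$ is read as a convex combination, and the per-layer operator inequalities simply add. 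This layering is exactly what reorganizes the sum over bricks so that the constant is a supremum over layers times a geometric series, not a raw sum over all bricks.

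The gap in your sketch is at the $N$-changing part of $V_\omega$, which is precisely the point you flag as the ``main obstacle.'' The $N$-preserving part is fine: the vacuum normalization gives $\tilde\Phi_\rho^{(0)}=P_\rho\tilde\Phi_\rho^{(0)}P_\rho$ (with $P_\rho$ the projection onto configurations meeting $B_\rho$), so $|\langle\psi,\tilde\Phi_\rho^{(0)}\psi\rangle|\le\|\tilde\Phi_\rho\|\,\|P_\rho\psi\|^2$ and the sum collapses to $(\sup_{\bm x}\sum_{\rho\ni\bm x}\|\tilde\Phi_\rho\|)\langle\psi,N\psi\rangle$ as you claim. But for $j\ne 0$ one only has $\tilde\Phi_\rho^{(j)}=P_\rho\tilde\Phi_\rho^{(j)}$, \emph{not} $P_\rho\tilde\Phi_\rho^{(j)}P_\rho$: a brick disjoint from the excitation set still creates a fresh excitation from the $B_\rho$-vacuum component of $\psi$, so the per-brick bound is only $\|\tilde\Phi_\rho\|\,\|P_\rho\psi\|\,\|\psi\|$. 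The shift-by-$1$ sector receives such contributions from \emph{every} brick, not just the large ones; the super-polynomial decay of $\|\tilde\Phi_\rho\|$ in $\diam\rho$ is a decay in size, not a localization, and the further Cauchy--Schwarz in $\rho$ you would need requires $\sum_\rho\|\tilde\Phi_\rho\|<\infty$, a sum over \emph{all} bricks of an infinite lattice, which diverges already for translation-invariant $\sfV$. Your ``summably small correction'' claim is therefore unsubstantiated; to make the $N$-grading argument rigorous you would need to regroup the bricks by size and phase, at which point you would essentially be recovering the paper's layering argument.
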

\begin{proof}
     Let $(B,\bB)$ be the associated brick of $\Lambda$ in the sense of \cref{defn:brick}. As in \cref{prp:brick}, set 
    \[
        \Phi_{\sfV}(B_{\rho}) \coloneqq \sum_{\bm{x} \in \Lambda} \sfV_{\bm{x}}^\rho  , \quad  \sfV_{\bm{x}}^\rho \coloneqq \sum_{\sigma \leq \rho} \mu_{\bB}(\rho,\sigma) \sfV_{\rho,\bm{x}}, \quad \sfV_{\bm{x},\rho} \coloneqq(\id_{\cA_{B_\rho}} \otimes \omega_{\sfH_0}) (\sfV_{\bm{x}}) 
    \]
    for $\rho \in \bB$,  
    where $\mu_{\bB}$ denotes the M\"obius function for $\bB$ (cf.\ \cref{subsubsection:brick}).
    For any $a \in \cA_{B_\rho}$ and $\sigma \leq \rho$, we have
    \[
        \sum_{\bm{x} \in \Lambda}\omega_{\sfH_0} ([\sfV_{\sigma, \bm{x}}, a]) = \sum_{\bm{x} \in \Lambda}\omega_{\sfH_0} \big( \big[\sfV_{\bm{x},\sigma}, (\id_{\cA_{B_\sigma}} \otimes \omega)(a) \big] \big) = \omega_{\sfH_0} ([\sfV, (\id_{\cA_{B_\sigma}} \otimes \omega)(a)]) =0, 
    \]
    and hence $\omega_{\sfH_0} ([\Phi_{\sfV}(B_\rho), a])=0$ for any $B_\rho \in \iota(\bB)$.  

    By replacing $\Lambda$ with $\Lambda \cup \bZ^{l_{\Lambda}}$ and $\sfH_0 + t \sfV$ with $\sfH_0 \boxtimes \sfh + t\sfV \otimes 1$ if necessary, without loss of generality we may assume that $\Lambda$ is relatively dense in $\bR^{l_{\Lambda}}$ and $\iota (\bB)=\bB$.  
    For $\bm{d} =(d_1,\cdots,d_{l_{\Lambda}}) \in \bZ_{\geq 1}^{l_{\Lambda}}$, we write $\bB_{\bm{d}}$ for the set of bricks of the form $D_{\bm{a},\bm{d}}\coloneqq \prod_{i=1}^{l_{\Lambda}}[a_i,a_i+d_i)$ for $\bm{a} = (a_1,\cdots,a_{l_{\Lambda}}) \in \bZ^{l_{\Lambda}}$. 
    We further decompose it into finite disjoint union $\bB_{\bm{d}}=\bigsqcup_{\fa}\bB_{\bm{d}, \fa}$ indexed by $\fa \in \bigoplus_{i=1}^{l_{\Lambda}} \bZ/d_i\bZ $ as $\bB_{\bm{d},\fa} = \{ D_{\bm{a},\bm{d}} \in \bB_{\bm{d}} \mid [\bm{a}] = \fa \} $. 
    We apply \cref{lem:local.perturbation.gap} to the perturbation $\sfH_{\bm{d},\rho}(t) \coloneqq \sfH_0+ t\sum_{\rho \in \bB_{\bm{d},\fa}} \sfV^\rho$, treating the family $\{ \sfV^{\rho} \}_{\rho \in \bB_{\bm{d},\fa}}$ as a local perturbation on the coarse-grained lattice in which each brick in $\bB_{\bm{d},\fa}$ is regarded as a single lattice point. 
    Then we obtain that $\sfH_{\bm{d},\fa}(t)$ has a spectral gap $1/2$ if $t \in [0,1/2v_{\bm{d},\fa}]$, where 
    \[ 
    v_{\bm{d},\fa} \coloneqq \sup_{\rho \in \bB_{\bm{d},\fa}} \| \Phi_\sfV(B_\rho )\| \leq 2^{3l_{\Lambda}+1} \kappa_{\Lambda} \cdot f_{\nu_1,\mu_1}(\|\bm{d}\|/2  -1) \cdot \vvert \sfV \vvert_{\nu_2,\mu_1}.
    \]
    Here, the inequality is given by \eqref{eqn:brick.decomposition} and $\mathrm{diam}(\rho)= \|\bm{d} \|$ for any $\rho \in \bB_{\bm{d},\fa}$. 
    In other words, $v_{\bm{d},\fa}\sfH + t \sum_{\rho \in \bB_{\bm{d},\fa}}\sfV^\rho$ has a spectral gap $v_{\bm{d},\fa}/2$ if $t \in [0,1/2]$. 

    By using $\# (\bigoplus_i \bZ/d_i \bZ)=\prod_i d_i \leq \| \bm{d}\|^{l_{\Lambda}}$,  let 
    \begin{align*}
        v \coloneqq  \sum_{\bm{d} \in \bZ_{\geq 1}^{l_{\Lambda}}} \sum_{\fa} v_{\bm{d},\fa} 
        \leq {}&{} 
        \sum_{\bm{d}} \|\bm{d}\|^{l_{\Lambda}}  \cdot   2^{3l_{\Lambda} +1} \kappa_{\Lambda} \cdot f_{\nu_1,\mu_1}(\|\bm{d}\|/2-1) \cdot \vvert \sfV \vvert_{\nu_2,\mu_1} \\
        \leq {}&{} 
        \sum_{\bm{d}} c\cdot 2^{3l_{\Lambda} +1} \kappa_{\Lambda} \cdot f_{\nu,\mu}(\|\bm{d}\|) \cdot \vvert \sfV \vvert_{\nu_2,\mu_1} \\
        \leq {}&{} c \cdot 2^{5l_{\Lambda}+2 } \kappa_{\Lambda}\kappa_{\bB} \cdot \vvert \sfV \vvert_{\nu_2,\mu_1},
    \end{align*}
    for some $c>0$ depending only on $\nu,\mu$. 
    Then, for $s \in [0,1/2]$, we have
    \[
        \sfH + \frac{s}{v} \sfV = \frac{1}{v} \sum_{\bm{d}} \sum_{\fa} \Big( v_{\bm{d},\fa}\sfH + \sum_{\rho \in \bB_{\bm{d},\fa}}s \sfV^\rho \Big) \geq \frac{1}{v} \sum_{\bm{d}} \sum_{\fa} \frac{v_{\bm{d},\fa}}{2}\big( 1 - \Omega_\omega \otimes \Omega_{\omega}^* \big) = \frac{1}{2}(1 - \Omega_\omega \otimes \Omega_\omega^*).
    \]
    That is, $\sfH + t\sfV$ has a spectral gap $1/2$ if $t \in [0,1/2v]$. 
\end{proof}

\begin{lem}\label{lem:appendix.Lieb.Robinson}
    Let $\sfV \in i\fDer ^{\al}_{\Lambda}$. 
    Assume that $\vvert \sfV \vvert_{\nu_8, \mu_4} \leq \vvert \sfH_0 \vvert$. Then, there are $M_{\nu,\mu} >0 $ and $N_{\sfH_0,\nu,\mu} >0$ such that the function $\Upsilon_{\sfG_{\sfH} , \nu,\mu}$ in \cref{prp:Lieb.Robinson} is bounded above by 
    \[
        \Upsilon_{\sfG_{\sfH},\nu,\mu}(t) \leq M_{\nu,\mu} + 2^{l_{\Lambda}+2} \kappa_{\Lambda}^2 \cdot f_{\nu,\mu}(2 \cdot (2C_{\nu_1}L_{\nu_1,\mu_1} N_{\nu_4,\mu_2} \vvert \sfV \vvert_{\nu_4,\mu_3} )^{1/\mu_1} \cdot t^{1/\mu_1})^{-1}.
    \]
\end{lem}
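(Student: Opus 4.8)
The plan is to reduce the statement to a bound on the adiabatic connection $1$-form $\sfG_{\sfH}$ of the linear family $\sfH(t') = \sfH_0 + t'\sfV$ in terms of $\sfV$ alone, and then substitute this into the explicit formula \eqref{eqn:Upsilon} for $\Upsilon$. First I would observe that here $d\sfH = \sfV\,dt$, so that by \cref{defn:automorphic.connection} the adiabatic connection reads
\[
\sfG_{\sfH,\bm{x}}(t') = -i \int_{-\infty}^\infty \bigg( \int_0^s \tau_{\sfH(t'),u}(\sfV_{\bm{x}})\,du \bigg) w_v(s)\,ds
\]
for each $\bm{x} \in \Lambda$ and $t' \in [0,1]$. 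Applying \cref{prp:Lieb.Robinson}~(2) to the constant $1$-form $\sfH(t')\,dt$ and the operator $\sfV_{\bm{x}}$ gives $\|\tau_{\sfH(t'),u}(\sfV_{\bm{x}})\|_{\bm{x},\nu_4,\mu_2} \leq \Upsilon_{\sfH(t')dt,\nu_4,\mu_2}(|u|)\cdot\|\sfV_{\bm{x}}\|_{\bm{x},\nu_4,\mu_3}$ for all $u \in \bR$ (the case $u<0$ being the same estimate applied to the backward evolution), where the index $\mu_3 = (\mu_2)_1$ is exactly the one appearing in the assertion.

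The key point is to make this uniform in $t' \in [0,1]$. Since each $\sfH_{0,\bm{x}}$ is strictly local at $\bm{x}$, one has $\|\sfH_{0,\bm{x}}\|_{\bm{x},f} = \|\sfH_{0,\bm{x}}\|$ for every $f \in \cF$, hence $\vvert\sfH_0\vvert_{\nu_8,\mu_4} = \vvert\sfH_0\vvert$; together with the hypothesis $\vvert\sfV\vvert_{\nu_8,\mu_4} \leq \vvert\sfH_0\vvert$ and $t'\le 1$ this yields $\vvert\sfH(t')\vvert_{\nu_8,\mu_4} \leq 2\vvert\sfH_0\vvert$. Inspecting \eqref{eqn:Upsilon}, the function $\Upsilon_{\sfH(t')dt,\nu_4,\mu_2}(u)$ depends on $t'$ only through the monotone dependence of $R_{\sfH(t')dt,\nu_4,\mu_2}(u) = 2(2uC_{\nu_5}L_{\nu_5,\mu_3}\vvert\sfH(t')\vvert_{\nu_8,\mu_4})^{1/\mu_3}$ on $\vvert\sfH(t')\vvert_{\nu_8,\mu_4}$; since $f_{\nu_4,\mu_2}^{-1}$ is increasing, replacing $\vvert\sfH(t')\vvert_{\nu_8,\mu_4}$ by $2\vvert\sfH_0\vvert$ produces a $t'$-independent majorant $\bar\Upsilon_{\nu_4,\mu_2}(u)$ with $\bar\Upsilon_{\nu_4,\mu_2}(u) \prec \exp(u^{\mu_\star})$ for any $\mu_2/\mu_3 < \mu_\star < 1$. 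Using the monotonicity of $\bar\Upsilon$, $\big|\int_0^s \|\tau_{\sfH(t'),u}(\sfV_{\bm{x}})\|_{\bm{x},\nu_4,\mu_2}\,du\big| \leq |s|\,\bar\Upsilon_{\nu_4,\mu_2}(|s|)\,\|\sfV_{\bm{x}}\|_{\bm{x},\nu_4,\mu_3}$, and since $w_v$ decays faster than $\exp(-s^\mu)$ for every $\mu<1$, the constant $N_{\nu_4,\mu_2} := 2\int_0^\infty s\,\bar\Upsilon_{\nu_4,\mu_2}(s)\,w_v(s)\,ds$ is finite and depends only on $\vvert\sfH_0\vvert$ and on $\nu,\mu$. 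Taking the supremum over $\bm{x}$ and over $t' \in [0,1]$ then gives $\vvert\sfG_{\sfH}\vvert_{\nu_4,\mu_2} \leq N_{\nu_4,\mu_2}\,\vvert\sfV\vvert_{\nu_4,\mu_3}$.

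Finally I would substitute this into \eqref{eqn:Upsilon}: by definition $R_{\sfG_{\sfH},\nu,\mu}(t) = 2\big(2tC_{\nu_1}L_{\nu_1,\mu_1}\vvert\sfG_{\sfH}\vvert_{\nu_4,\mu_2}\big)^{1/\mu_1} \leq 2\big(2C_{\nu_1}L_{\nu_1,\mu_1}N_{\nu_4,\mu_2}\vvert\sfV\vvert_{\nu_4,\mu_3}\big)^{1/\mu_1}\,t^{1/\mu_1}$, and since $f_{\nu,\mu}^{-1}$ is increasing this controls the term $f_{\nu,\mu}(R_{\sfG_{\sfH},\nu,\mu}(t))^{-1}$ in \eqref{eqn:Upsilon}; bounding the maximum there by a sum and setting $M_{\nu,\mu} := 2c_{\nu,\mu,2} + 2^{l_\Lambda+2}\kappa_\Lambda^2\big(\sup_{r>0}f_{\nu-l_\Lambda,\mu_1}/f_{\nu,\mu}\big)f_{\mu_1}(r_0/2)$ for the remaining $t$-independent constants yields the claimed inequality. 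The only real care needed is the bookkeeping of the nested sequences $\nu_k,\mu_k$ through the repeated invocations of \cref{prp:Lieb.Robinson}~(2), checking that every seminorm of $\sfH(t')$ that arises is at level at most $(\nu_8,\mu_4)$, which is precisely the level at which the hypothesis on $\sfV$ is imposed; the convergence of the integral defining $N_{\nu_4,\mu_2}$ is then automatic from the growth bounds already recorded for $\Upsilon$ and $w_v$.
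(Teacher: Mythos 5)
Your proof is correct and takes essentially the same route as the paper: read off the explicit form of $\Upsilon$ from \eqref{eqn:Upsilon}, bound $\vvert\sfG_{\sfH}\vvert_{\nu_4,\mu_2}$ via the integral formula from \cref{lem:adiabatic.well-defined} combined with \cref{prp:Lieb.Robinson}~(2), use $\vvert\sfH(t')\vvert_{\nu_8,\mu_4}\leq 2\vvert\sfH_0\vvert$ to make the majorant $t'$-independent, define $N$ as the resulting finite integral against $w_v$, and substitute back into $R_{\sfG_{\sfH},\nu,\mu}$. The index bookkeeping and constants match the paper's argument.
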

\begin{proof}
Let us recall the explicit form of $\Upsilon_{\sfG,\nu,\mu}$ given in \eqref{eqn:Upsilon}. That is, 
\begin{align*}
    \Upsilon_{\sfG_{\sfH} , \nu, \mu} (t) = M_{\nu,\mu} + 2^{l_{\Lambda} +2}  \kappa_{\Lambda}^2 \cdot f_{\nu,\mu}(2 \cdot (2C_{\nu_1}L_{\nu_1,\mu_1} \vvert \sfG_{\sfH} \vvert _{\nu_4,\mu_2})^{1/\mu_1}  \cdot t^{1/\mu_1})^{-1},  
\end{align*}
where $M_{\nu,\mu} = 2c_{\nu,\mu,2} + (\sup f_{\nu-l_{\Lambda},\mu_1 } / f_{\nu,\mu} ) \cdot f_{\mu_1}(r_0/2)$ is a constant depending only on $\nu,\mu$. 

By definition (cf.\ the proof of \cref{lem:adiabatic.well-defined}), $\vvert \sfG_{\sfH} \vvert _{\nu,\mu}$ has an upper bound given by 
\begin{align*}
    \vvert  \sfG_{\sfH} \vvert_{\nu,\mu} \leq {}&{} 2\int_0^\infty \bigg( \int_0^t \Upsilon_{\sfH(u),\nu,\mu}(u) \cdot \vvert \sfV \vvert_{\nu,\mu_1}du \bigg) w_v(t)dt \\
    \leq {}&{} \vvert \sfV \vvert_{\nu,\mu_1} \cdot 2\int_0^\infty t \Upsilon_{\sfH,\nu,\mu}(t) \cdot w_v(t)dt. 
\end{align*}
If $ \vvert \sfV \vvert_{\nu_4,\mu_2} \leq \vvert \sfH_0 \vvert$ holds, then we have $\vvert \sfH \vvert_{\nu_4,\mu_2} \leq 2\vvert \sfH_0\vvert$. Again by \eqref{eqn:Upsilon}, we have 
\[
    \Upsilon_{\sfH,\nu,\mu} (t) \leq M_{\nu,\mu} + 2^{l_{\Lambda} +2 } \kappa_{\Lambda}^2 \cdot f_{\nu,\mu}(2 \cdot (4 C_{\nu_1}L_{\nu_1,\mu_1} \vvert \sfH_0 \vvert )^{1/\mu_1} \cdot t^{1/\mu_1})^{-1}.
\]
We define the constant $N_{\sfH_0,\nu,\mu} >0$ as
\[ 
    N_{\sfH_0,\nu,\mu} \coloneqq  2\int_0^\infty t \cdot \Big( M_{\nu,\mu} + 2^{l_{\Lambda} +2 } \kappa_{\Lambda}^2 \cdot f_{\nu,\mu}(2 \cdot (4 C_{\nu_1}L_{\nu_1,\mu_1} \vvert \sfH_0 \vvert )^{1/\mu_1} \cdot t^{1/\mu_1})^{-1} \Big) \cdot w_v(t)dt. 
\]
Then, if $ \vvert \sfV \vvert_{\nu_4,\mu_2} \leq \vvert \sfH \vvert$, we get  $\vvert \sfG _{\sfH} \vvert_{\nu,\mu} \leq \vvert \sfV \vvert_{\nu,\mu_1} \cdot N_{\sfH_0,\nu,\mu}$. Apply this to the pair $(\nu_4,\mu_2)$ and substitute to the first equality. 
\end{proof}
Let 
\[ 
    I_{\nu,\mu} \coloneqq M_{\nu,\mu} +  2^{l_{\Lambda} +2}  \kappa_{\Lambda}^2 \cdot f_{\nu,\mu}(2 \cdot (2C_{\nu_1}L_{\nu_1,\mu_1} N_{\sfH,\nu,\mu})^{1/\mu_1})^{-1}.
\]
Then, if $\vvert \sfV \vvert_{\nu_4,\mu_3} \leq 1$ and $t \leq 1$, we have $\Upsilon_{\sfG_{\sfH},\nu,\mu}(t) \leq I_{\nu,\mu}$ for any $t \in [0,1]$. 

\begin{lem}\label{lem:gap.stability.inner}
    Let $\sfV$ be an inner perturbation, i.e., $\sup_{\bm{x} \in \Lambda} f(\rmd(\bm{x},\bm{x}_0))^{-1} \cdot \| \sfV_{\bm{x}} \| <\infty $ for any $f \in \cF$. Assume that $\vvert \sfV \vvert_{\nu_6,\mu_4} \leq 1$, $\vvert \sfV \vvert _{\nu_{10},\mu_5} \leq \vvert \sfH \vvert$ and 
    \begin{align*}
         2^{5l_{\Lambda}+3}\kappa_{\Lambda}\kappa_{\bB}  \cdot I_{\nu_2,\mu_1} \cdot (d_{\nu_2,\mu_2}\vvert \sfV \vvert_{\nu_3,\mu_5} \cdot N_{\sfH_0,\nu_3,\mu_4} \cdot (\vvert \sfH_0 \vvert +\vvert \sfV \vvert_{\nu_3,\mu_4}) + \vvert \sfV \vvert_{\nu_2,\mu_2}) \leq 1,
    \end{align*}
    where $d_{\nu,\mu}$ is the constant in \cref{lem:almost.local.continuity}. 
    Then $\sfH_0 + t \sfV$ has a spectral gap $1/2$ for $s \in [0,1]$. 
\end{lem}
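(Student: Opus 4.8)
The plan is to reduce the statement about a general inner perturbation $\sfV$ to the state-preserving case handled by \cref{lem:gap.stability.state.preserving.perturbation}, by conjugating $\sfH_0 + t\sfV$ with the parallel transport generated by the adiabatic connection $1$-form. The key observation is that \cref{thm:automorphic.equivalence}, applied to the smooth path $\sfH(t) = \sfH_0 + t\sfV$ of (a priori gapped on a small interval) UAL Hamiltonians, produces a smooth family of LG automorphisms $\alpha_t \coloneqq \alpha(\sfG_{\sfH} \,; t)$ such that $\alpha_t(\sfH(t))$ has constant ground state $\omega_{\sfH_0}$. However, this reasoning is circular as stated because $\sfG_{\sfH}$ is only defined when $\sfH(t)$ is gapped. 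The way around this is to define $\sfG_{\sfH}$ directly by the formula in \cref{defn:automorphic.connection} using the time evolution $\tau_{\sfH(t)}$ (which makes sense for any UAL Hamiltonian, gapped or not) and then work perturbatively: the hypotheses $\vvert \sfV \vvert_{\nu_6,\mu_4} \leq 1$ and $\vvert \sfV \vvert_{\nu_{10},\mu_5} \leq \vvert \sfH_0 \vvert$ ensure, via \cref{lem:appendix.Lieb.Robinson} and the estimate $\vvert \sfG_{\sfH} \vvert_{\nu,\mu} \leq \vvert \sfV \vvert_{\nu,\mu_1} \cdot N_{\sfH_0,\nu,\mu}$ from its proof, that the connection $1$-form $\sfG_{\sfH}$ has small almost-local norms, uniformly on $t \in [0,1]$.

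First I would write $\sfH(t) = \sfH_0 + t\sfV$ and define $\sfK(t) \coloneqq \alpha_t(\sfH(t)) = \alpha_t(\sfH_0) + t\alpha_t(\sfV)$, where $\alpha_t = \alpha(\sfG_{\sfH} \,; t)$. By the differential equation characterizing the adiabatic connection (the computation right before \cref{thm:automorphic.equivalence}), the family $\sfK(t)$ has the property that $\omega_{\sfH_0} \circ \alpha_t^{-1}$ is a ground state of $\sfH(t)$ whenever $\sfH(t)$ is gapped; equivalently $\omega_{\sfH_0}$ is a $\tau_{\sfK(t)}$-invariant state annihilated by $\sfad(\sfK(t))$, i.e.\ $\omega_{\sfH_0}([\sfK(t), a]) = 0$ for all $a \in \cA_{\Lambda}^{\al}$. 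Now decompose $\sfK(t) - \sfH_0 = (\alpha_t(\sfH_0) - \sfH_0) + t\alpha_t(\sfV) =: \sfW(t)$. Using \cref{prp:relative.evolution.is.evolution} together with the integral formula \eqref{eqn:relative.evolution}, I would bound $\vvert \alpha_t(\sfH_0) - \sfH_0 \vvert_{\nu,\mu}$ by $\int_0^t \Upsilon_{\sfG_{\sfH},\nu,\mu}(u) \cdot \vvert \sfad(\sfG_{\sfH}(u)) \circ \alpha_u(\sfH_0) \vvert_{\nu,\mu}\,du$, which by \cref{lem:almost.local.continuity} is at most $d_{\nu,\mu} \cdot I_{\nu_2,\mu_1} \cdot \vvert \sfG_{\sfH}\vvert_{\nu_1,\mu_2} \cdot \vvert \sfH_0 \vvert$ on $[0,1]$, and similarly $\vvert t\alpha_t(\sfV) \vvert_{\nu,\mu} \leq I_{\nu,\mu} \cdot \vvert \sfV \vvert_{\nu,\mu_1}$. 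Feeding in $\vvert \sfG_{\sfH}\vvert_{\nu,\mu} \leq \vvert \sfV \vvert_{\nu,\mu_1} \cdot N_{\sfH_0,\nu,\mu}$ from \cref{lem:appendix.Lieb.Robinson}, these combine into an upper bound on $\vvert \sfW(t) \vvert_{\nu_2,\mu_1}$ of exactly the shape appearing in the hypothesis of the lemma (the factor $\vvert \sfH_0 \vvert + \vvert \sfV \vvert_{\nu_3,\mu_4}$ accounts for $\vvert \sfH(u) \vvert$, which dominates $\vvert \alpha_u(\sfH_0)\vvert$; the two summands $d_{\nu_2,\mu_2}\vvert \sfV \vvert_{\nu_3,\mu_5} N_{\sfH_0,\nu_3,\mu_4}(\vvert \sfH_0\vvert + \vvert \sfV \vvert_{\nu_3,\mu_4})$ and $\vvert \sfV \vvert_{\nu_2,\mu_2}$ are precisely the two pieces of $\sfW(t)$).

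Next I would note that $\sfH_0 + \sfW(t) = \sfK(t)$ has the same spectral gap as $\sfH(t) = \sfH_0 + t\sfV$, since $\alpha_t$ is a $\ast$-automorphism of $\cA_{\Lambda}$ and \cref{lem:smooth} (4) records that conjugation by an LG automorphism preserves gappedness of the corresponding ground state. Therefore it suffices to show $\sfK(t)$ has spectral gap $1/2$. Since $\omega_{\sfH_0}([\sfK(t), a]) = 0$ and hence $\omega_{\sfH_0}([\sfW(t), a]) = \omega_{\sfH_0}([\sfK(t) - \sfH_0, a]) = -\omega_{\sfH_0}([\sfH_0, a])$; but $\omega_{\sfH_0}$ is the ground state of the local (frustration-free) Hamiltonian $\sfH_0$, so $\omega_{\sfH_0}([\sfH_0, a]) = 0$ as well, giving $\omega_{\sfH_0}([\sfW(t), a]) = 0$ for all $a \in \cA_{\Lambda}^{\al}$. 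Thus $\sfK(t) = \sfH_0 + \sfW(t)$ is a state-preserving perturbation of the local Hamiltonian $\sfH_0$ in the precise sense of \cref{lem:gap.stability.state.preserving.perturbation}, and that lemma gives spectral gap $1/2$ provided the interpolation parameter lies in $[0,(2^{5l_{\Lambda}+3}\kappa_{\Lambda}\kappa_{\bB}\vvert \sfW(t) \vvert_{\nu_2,\mu_1})^{-1}]$. Writing $\sfK(t) = \sfH_0 + 1 \cdot \sfW(t)$, the condition to check at parameter $1$ is exactly $2^{5l_{\Lambda}+3}\kappa_{\Lambda}\kappa_{\bB}\vvert \sfW(t) \vvert_{\nu_2,\mu_1} \leq 1$, which follows from the bound of the previous paragraph together with the third displayed hypothesis of the lemma. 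Since this holds for every $t \in [0,1]$, we conclude that $\sfH_0 + t\sfV$ has spectral gap $1/2$ for all $t \in [0,1]$.

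The main obstacle I anticipate is the bookkeeping of which almost-local seminorm indices propagate through each estimate: the adiabatic connection construction costs several increments in both $\nu$ and $\mu$ (through $\Upsilon^{(k)}$ and $L_{\nu,\mu}$), \cref{lem:almost.local.continuity} costs one more, and \cref{lem:appendix.Lieb.Robinson} costs several; one must verify that the chain of hypotheses $\vvert \sfV \vvert_{\nu_6,\mu_4}\leq 1$, $\vvert \sfV \vvert_{\nu_{10},\mu_5}\leq \vvert \sfH_0\vvert$ is exactly strong enough to run \cref{lem:appendix.Lieb.Robinson} at the shifted indices needed to control $\vvert \sfW(t)\vvert_{\nu_2,\mu_1}$, landing precisely in the form of the third hypothesis. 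A secondary technical point is justifying that $\sfG_{\sfH}$ and the parallel transport $\alpha_t$ are well-defined and smooth even before knowing $\sfH(t)$ is gapped — this is fine because \cref{defn:automorphic.connection} only uses the time evolution $\tau_{\sfH(t)}$, which exists for any UAL Hamiltonian, and \cref{lem:adiabatic.well-defined}, \cref{prp:Lieb.Robinson.Ck} make no use of the spectral gap. Once the index chase is organized, everything else is a direct substitution.
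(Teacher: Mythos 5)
Your overall strategy — conjugate $\sfH_0 + t\sfV$ by the parallel transport of the adiabatic connection to reduce to the state‑preserving perturbation of \cref{lem:gap.stability.state.preserving.perturbation}, then plug in \cref{lem:appendix.Lieb.Robinson} and \cref{lem:almost.local.continuity} to get the quantitative bound — is exactly the paper's. The norm estimate you set up for $\vvert\sfW(t)\vvert_{\nu_2,\mu_1}$ and the way you combine the two summands also matches the paper's displayed computation.

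However, there is a genuine gap at the step you flag and then declare resolved. You correctly identify the circularity risk, but your proposed fix does not actually close it. What you show is that $\sfG_{\sfH}$ and $\alpha_t = \alpha(\sfG_{\sfH}\,;t)$ are \emph{well-defined} without a spectral gap — that much is fine, since \cref{defn:automorphic.connection} and \cref{prp:Lieb.Robinson.Ck} only use the time evolution $\tau_{\sfH(t)}$. But the crucial identity you then rely on, $\omega_{\sfH_0}([\sfK(t),a]) = 0$, is the content of \cref{thm:automorphic.equivalence} (Moon--Ogata's automorphic equivalence), and that theorem is proved \emph{under the gap hypothesis}. Without a gap there is no reason the parallel transport carries the ground state of $\sfH_0$ to a stationary state of $\sfH(t)$. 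You write ``Since $\omega_{\sfH_0}([\sfK(t), a]) = 0$ ...'' as though it has been established, but you have only noted that it would follow ``whenever $\sfH(t)$ is gapped'' — which is the conclusion of the lemma, not a hypothesis.

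The paper closes this loop with a bootstrap/continuity argument that is missing from your proposal. One sets $t_0 := \sup\{t : \sfH_0 + u\sfV \text{ has gap } 1/2 \text{ for all } u \in [0,t]\}$; by the spectral continuity observed in the proof of \cref{lem:cut.diffused}, $t_0>0$, and the ground state of $\sfH(u)$ is smooth on $[0,t_0]$. Assuming $t_0 < 1$ for contradiction, the adiabatic connection and \cref{thm:automorphic.equivalence} are legitimately applicable on $[0,t_0]$, and your estimate then shows $2^{5l_{\Lambda}+3}\kappa_{\Lambda}\kappa_{\bB}\vvert\sfW(t)\vvert_{\nu_2,\mu_1} < 1$ uniformly on $[0,t_0]$, so by \cref{lem:gap.stability.state.preserving.perturbation} the gap remains at least $1/2$ on $[0,t_0]$ and, by continuity, persists slightly beyond $t_0$, contradicting its definition. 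Everything else in your write-up (the split of $\sfW$, the role of $I_{\nu,\mu}$, $d_{\nu,\mu}$, $N_{\sfH_0,\nu,\mu}$, the final appeal to the state-preserving lemma with interpolation parameter $1$) is sound once this bootstrap is in place.
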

\begin{proof}
    Let $t_0$ be the supremum of $t>0$ such that $\sfH + u\sfV$ has a spectral gap $1/2$ for any $u \in [0,t]$. As is observed in the proof of \cref{lem:cut.diffused}, the spectrum of $\sfH(u)=\sfH_0 + u\sfV$ is continuous, and hence we have $t_0>0$. Moreover, the ground state of $\sfH( u)$ is smooth. 
    We assume that $t_0 <1$ and show the contradiction.

    For $t \in [0,t_0]$, set
    \[
    \sfW(t) \coloneqq \alpha(\sfG_{\sfH} \,; t)^{-1} (\sfH_0 +t \sfV ) - \sfH_0.
    \]
    Then $\sfW(t)$ is an inner UAL derivation such that $\omega_{\sfH_0} ([\sfW,a])=0$ and, by \cref{lem:almost.local.continuity,prp:Lieb.Robinson,lem:appendix.Lieb.Robinson},  
    \begin{align*}
    \vvert \sfW \vvert_{\nu_2,\mu_1} \leq {}&{} \int_0^t \vvert \alpha(\sfG_{\sfH} \,; u)^{-1} ([\sfG_{\sfH} , \sfH]) \vvert_{\nu_2,\mu_1} + \vvert \alpha (\sfG_{\sfH} \,; u)^{-1} (\sfV) \vvert_{\nu_2,\mu_1} du \\
    \leq {}&{} t \cdot \Upsilon_{\sfG_{\sfH} , \nu_2,\mu_1}(t ) \cdot (  d_{\nu_2,\mu_2} \cdot \vvert \sfG_{\sfH} \vvert_{\nu_3,\mu_4} \cdot \vvert \sfH \vvert_{\nu_3,\mu_4} + \vvert \sfV \vvert_{\nu_2,\mu_2}) \\
    \leq {}&{} t \cdot I_{\nu_2,\mu_1} \cdot (d_{\nu_2,\mu_2} \cdot \vvert \sfV \vvert_{\nu_3,\mu_5} \cdot N_{\sfH_0,\nu_3,\mu_4} \cdot \vvert \sfH \vvert_{\nu_3,\mu_4} + \vvert \sfV \vvert_{\nu_2,\mu_2}). 
    \end{align*}
    By assumption, we get $2^{5l_{\Lambda}+3}\kappa_{\Lambda}\kappa_{\bB} \vvert \sfW \vvert_{\nu_2,\mu_1}  <1$.
    Therefore, by \cref{lem:gap.stability.state.preserving.perturbation}, $\alpha(\sfG_{\sfH} \,; t)(\sfH_0 + t\sfV) = \sfH_0 + \sfW(t)$ has a spectral gap $1/2$ for any $t \in [0,1]$. 
\end{proof}

\begin{thm}\label{thm:gap.stability}
    Let $\sfH_0 =(\sfH_{0,\bm{x}})_{\bm{x} \in \Lambda}$ be a local Hamiltonian with a spectral gap $1$ and let $\sfV =(\sfV_{\bm{x}})_{\bm{x} \in \Lambda} \in \fDer ^{\al}_{\Lambda}$. Assume that the almost local norms of $\sfV$ are sufficiently small to satisfy the assumption of \cref{lem:gap.stability.inner}. Then, for any $t \in [0,1]$, the perturbation $\sfH_0 + t\sfV$ is a smooth family of UAL Hamiltonians that has the uniform spectral gap $1/2$. 
\end{thm}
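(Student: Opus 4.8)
The plan is to reduce \cref{thm:gap.stability} to \cref{lem:gap.stability.inner} by the following two-step strategy. First, I would separate the perturbation $\sfV$ into its ``state-distorting'' part and its ``inner'' part, relative to the product ground state $\omega_{\sfH_0}=\bigotimes_{\bm{x}}\langle\blank\Omega^{\sfH_0}_{\bm{x}},\Omega^{\sfH_0}_{\bm{x}}\rangle$. Concretely, using the conditional expectations and the averaging description in \cref{rmk:CE.average}, set $\sfV_{\bm{x}}' \coloneqq \sfV_{\bm{x}} - \Omega^{\sfH_0}_{\bm{x}}\otimes(\Omega^{\sfH_0}_{\bm{x}})^*\,\sfV_{\bm{x}}\,\Omega^{\sfH_0}_{\bm{x}}\otimes(\Omega^{\sfH_0}_{\bm{x}})^* - \ldots$ so that the resulting family annihilates the local ground vectors and so that the difference $\sfV-\sfV'$ is an inner perturbation in the sense of \cref{lem:gap.stability.inner} (its local generator decays away from $\bm{x}_0$ because it is a sum of the ``boundary corrections'' produced by commuting $\sfV$ past the local projections, and these are controlled by $\vvert\sfV\vvert_{\bm{x},f}$ using \eqref{eqn:derivation.welldef} and the brick estimate \eqref{eqn:brick.estimate}). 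The key algebraic point is that $\sfV'$ satisfies the hypotheses of \cref{lem:gap.stability.state.preserving.perturbation}, namely $\omega_{\sfH_0}([\sfV',a])=0$ for every $a\in\cA_{\Lambda}^{\al}$, while $\sfV-\sfV'$ is inner with small almost local norms.

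Second, I would apply \cref{lem:gap.stability.inner} directly: the smallness hypotheses on $\sfV$ guarantee that $\vvert\sfV-\sfV'\vvert_{\nu_k,\mu_j}$ meets all the numerical inequalities required there (the constants $M_{\nu,\mu}$, $N_{\sfH_0,\nu,\mu}$, $I_{\nu,\mu}$, $d_{\nu,\mu}$ are the ones assembled in \cref{lem:appendix.Lieb.Robinson,lem:almost.local.continuity}), so $\sfH_0 + t(\sfV-\sfV')$ has a uniform gap $1/2$ for $t\in[0,1]$. To incorporate the state-preserving part $t\sfV'$, I would run a continuity (``clopen'') argument in $t$ exactly as in the proof of \cref{lem:gap.stability.inner}: the spectrum of $\sfH_0 + t\sfV$ varies continuously (as observed in the proof of \cref{lem:cut.diffused}, via the perturbed GNS generator \eqref{eqn:perturbed.unbounded.operator}), the ground state is smooth by \cref{thm:automorphic.equivalence}, and the adiabatic conjugation $\alpha(\sfG_{\sfH}\,;t)(\sfH_0+t\sfV)$ converts the full perturbation into $\sfH_0 + \sfW(t)$ with $\sfW(t)$ a state-preserving inner UAL derivation whose norm is estimated by \cref{lem:almost.local.continuity}, \cref{prp:Lieb.Robinson}, and \cref{lem:appendix.Lieb.Robinson}, and is small enough for \cref{lem:gap.stability.state.preserving.perturbation} to apply. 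Smoothness of the family $t\mapsto(\sfH_0+t\sfV,\omega_{\sfH_0+t\sfV})$ in the sense of \cref{defn:smooth} is then immediate: condition (1),(2) hold since $t\mapsto\sfH_{0,\bm{x}}+t\sfV_{\bm{x}}$ is affine with uniformly bounded almost local norms, and condition (3) follows from \cref{lem:smooth}(4) applied to $\alpha(\sfG_{\sfH}\,;t)$.

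The main obstacle I expect is purely bookkeeping rather than conceptual: verifying that the splitting $\sfV=\sfV'+(\sfV-\sfV')$ respects all the seminorm bounds quantitatively, i.e.\ that the ``inner remainder'' $\sfV-\sfV'$ genuinely has local generators decaying faster than any $F$-function away from $\bm{x}_0$, with the decay constant controlled by $\vvert\sfV\vvert$ in the relevant $(\nu_k,\mu_j)$ — and then chasing these constants through the chain \cref{lem:local.perturbation.gap}$\to$\cref{lem:gap.stability.state.preserving.perturbation}$\to$\cref{lem:appendix.Lieb.Robinson}$\to$\cref{lem:gap.stability.inner} so that the final threshold on $\vvert\sfV\vvert$ is explicit. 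A secondary subtlety is ensuring that the brick decomposition and the coarse-graining argument of \cref{lem:gap.stability.state.preserving.perturbation} apply after the splitting, which requires that $\sfV'$ still lies in $\fDer^{\al}_{\Lambda}$ with controlled $\vvert\blank\vvert_{\nu_2,\mu_1}$; this is handled by the Fréchet-algebra inequality \eqref{eqn:Frechet.algebra} together with the fact that the map $a\mapsto \Omega\otimes\Omega^*\,a\,\Omega\otimes\Omega^*$ is a contraction compatible with the conditional expectations $\Pi_{\bm{x},r}$. Once these estimates are in place, the theorem follows by assembling the pieces as above.
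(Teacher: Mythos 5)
Your decomposition $\sfV = \sfV' + (\sfV-\sfV')$ with $\sfV'$ state-preserving and $\sfV-\sfV'$ inner does not exist for a general UAL perturbation, and this is the crux of the difficulty that makes \cref{thm:gap.stability} a nontrivial strengthening of \cref{lem:gap.stability.inner}. Innerness in the sense of \cref{lem:gap.stability.inner} means that $\|\sfV_{\bm{x}}\|$ decays faster than any $F$-function in $\rmd(\bm{x},\bm{x}_0)$, i.e.\ $\sum_{\bm{x}}\|\sfV_{\bm{x}}\|$ converges to an almost local operator. But the ``off-diagonal'' piece of $\sfV_{\bm{x}}$ with respect to the product vacuum --- the piece that fails to annihilate $\Omega_{\omega_{\sfH_0}}$ --- need not decay with $\bm{x}$ at all; a translation-invariant $\sfV$ with $\sfV_{\bm{x}}\Omega_{\bm{x}}^{\sfH_0}\neq 0$ for every $\bm{x}$ has no such splitting. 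The ``boundary corrections'' you invoke would come from commuting $\sfV$ past truncation projections $\Pi_{\bm{x}_0,n}$, and those do decay, but in $n$, not in $\bm{x}$ for a fixed decomposition of $\sfV$; they do not produce the decomposition you want.

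There is a second, related gap even before the decomposition issue: the clopen continuity argument you invoke ``exactly as in the proof of \cref{lem:gap.stability.inner}'' cannot get started for a non-inner $\sfV$. In that proof the seed $t_0>0$ comes from \cref{lem:cut.diffused}, which requires $\sum_{\bm{x}}\|\sfV_{\bm{x}}\|<1/4$ so that the perturbation is implemented by a bounded operator $v\in\cA_{\Lambda}^{\al}$ on the GNS space and Kato-type continuity of the gap applies. For general $\sfV\in\fDer^{\al}_{\Lambda}$ the total perturbation is an unbounded derivation, this step fails, and the adiabatic connection $\sfG_{\sfH}$ (which you want to use to produce $\sfW(t)$) is not yet defined since it presupposes the gap. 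The paper's actual route sidesteps both problems by an exhaustion rather than a decomposition: it truncates $\sfV$ to $\sfV_n:=\Pi_{\bm{x}_0,n}(\sfV)$, which is genuinely inner with seminorms dominated by those of $\sfV$, applies \cref{lem:gap.stability.inner} to each $\sfH_0+t\sfV_n$ uniformly in $n$, and then shows the gap survives the $n\to\infty$ limit by comparing $\omega_{n,t}=\omega_{\sfH_0}\circ\alpha(\sfG_{\sfH_n}\,;t)^{-1}$ with $\omega_t=\omega_{\sfH_0}\circ\alpha(\sfG_{\sfH}\,;t)^{-1}$ in the almost local topology (via \cref{prp:Lieb.Robinson} and \cref{cor:Lieb.Robinson.approx.Ck}) together with an $\varepsilon$-argument on the quadratic form $\omega_t(a^*[\sfH_0+\sfV,a])$. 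You would need to replace your decomposition step with this truncation-and-limit step for the proof to go through.
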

\begin{proof}
    We fix a reference point $\bm{x}_0 \in \Lambda$ and let $\sfV_{n}  \coloneqq \Pi_{\bm{x}_0,n}(\sfV)$ and $\sfH_{n} (t) \coloneqq \sfH_0 +t\sfV_n$. 
    Then, the adiabatic parallel transport $\alpha (\sfG_{\sfH_n} \,; t)$ restricts to $\cA_{B_n(\bm{x})}$, which is nothing else than the adiabatic parallel transport of the finite volume Hamiltonian $\Pi_{\bm{x_0},n}(\sfH + \sfV)$. By \cref{prp:Lieb.Robinson} and \cref{cor:Lieb.Robinson.approx.Ck} (2) (we also refer to \cite{nachtergaeleQuasilocalityBoundsQuantum2019}*{Theorem 6.14}), we have
    \[
     \alpha (\sfG_{\sfH_n} \,; t) (a)  \xrightarrow{n \to \infty} \alpha(\sfG_{\sfH} \,; t ) (a), 
    \]
    in the almost local topology, and hence 
    \[ 
    \omega_{n,t}(a) \coloneqq \omega_{\sfH_0} \circ \alpha (\sfG_{\sfH_n} \,; t)^{-1}(a) \xrightarrow{n \to \infty} \omega_{\sfH_0} \circ \alpha (\sfG_{\sfH} \,; t)^{-1}(a) =: \omega_t(a)
    \]
    for any  $a \in \cA_{\Lambda}^{\al}$. 
    The state $\omega_{n,t}$ is the distinguished ground state of $\sfH + t\sfV_n$.

    By \cref{lem:gap.stability.inner}, $\sfH_n(t)$ has a spectral gap $1/2$ for any $t \in [0,1]$. Let $a \in \cA_{B_{r}(\bm{x})}$ such that $\|a\|=1$ and $\omega_t (a)=0$. For $\varepsilon >0$, take a sufficiently large $n >0$ such that $|\omega_{n,t}(a)| \leq \varepsilon$, $|(\omega_t - \omega_{n,t})(a^*[\sfH+\sfV ,a])| \leq \varepsilon$, $|(\omega_t - \omega_{n,t})([\sfH+\sfV ,a])| \leq \varepsilon$, and $ \| [\sfV - \sfV_n , a] \| \leq \varepsilon$. Then 
    \begin{align*}
        |\omega_t([\sfH + \sfV,a])| \leq  |\omega_{n,t}([\sfH+\sfV,a]) | + \varepsilon  \leq  |\omega_{n,t}([\sfH+\sfV_n,a]) | + 2 \varepsilon =2\varepsilon ,
    \end{align*}
    and hence 
    \begin{align*}
        |\omega_{n,t}(a^*[\sfH + \sfV_n,a])| \geq {}&{} |\omega_{n,t}(a_n^* [\sfH + \sfV _n, a])| - |\omega_{n,t}(a)| \cdot \omega_{n,t}( [\sfH + \sfV_n , a])\\
        \geq  {}&{} |\omega_{n,t}(a_n^*[\sfH + \sfV_n , a_n]) | - 2 \varepsilon,
    \end{align*}
    where $a_n \coloneqq a - \omega_{n,t}(a) \cdot 1$ (note that $\|a_n \| \geq 1-\varepsilon$). Therefore, 
    \begin{align*}
        \omega_t(a^*[\sfH + \sfV,a]) \geq {}&{} \omega_{n,t}(a^*[\sfH+\sfV,a]) - \varepsilon \\
        \geq {}&{} \omega_{n,t}(a^*[\sfH+\sfV_n,a]) - 2\varepsilon \\
        \geq {}&{}  \omega_{n,t}(a_n^*[\sfH+\sfV_n,a_n]) - 4 \varepsilon \\
        \geq {}&{} 2^{-1} (1-\varepsilon)^2 -4\varepsilon. 
    \end{align*}
    Since $\varepsilon >0$ is arbitrary, this shows that $\omega_t$ is a non-degenerate ground state of $\sfH_0 + t\sfV$ with a spectral gap $1/2$.
\end{proof}

\begin{bibdiv}
\begin{biblist}
\bib{aasenCharacterizationClassificationFermionic2022}{misc}{
      author={Aasen, David},
      author={Bonderson, Parsa},
      author={Knapp, Christina},
       title={Characterization and Classification of Fermionic Symmetry Enriched Topological Phases},
        date={2021},
        note={preprint, {\href{http://arxiv.org/abs/2109.10911}{\texttt{arXiv:2109.10911[cond-mat.str-el]}}}},
}

\bib{affleckValenceBondGround1988}{article}{
      author={Affleck, Ian},
      author={Kennedy, Tom},
      author={Lieb, Elliott~H.},
      author={Tasaki, Hal},
       title={Valence bond ground states in isotropic quantum
  antiferromagnets},
        date={1988},
        ISSN={0010-3616},
     journal={Communications in Mathematical Physics},
      volume={115},
      number={3},
       pages={477\ndash 528},
      review={\MR{931672}},
}

\bib{alvarezlopezGenericCoarseGeometry2018}{book}{
      author={{\'A}lvarez~L{\'o}pez, Jes{\'u}s~A.},
      author={Candel, Alberto},
       title={Generic {{Coarse Geometry}} of {{Leaves}}},
      series={Lecture {{Notes}} in {{Mathematics}}},
   publisher={Springer International Publishing},
     address={Cham},
        date={2018},
        ISBN={978-3-319-94132-5},
}

\bib{arakiQuasifreeStatesCAR1970}{article}{
      author={Araki, Huzihiro},
       title={On quasifree states of {{CAR}} and bogoliubov automorphisms},
        date={1970},
     journal={Publications of The Research Institute for Mathematical
  Sciences},
      volume={6},
       pages={385\ndash 442},
}

\bib{artymowiczQuantizationHigherBerry2023}{article}{
      author={Artymowicz, Adam},
      author={Kapustin, Anton},
      author={Sopenko, Nikita},
       title={Quantization of the higher {{Berry}} curvature and the higher
  {{Thouless}} pump},
        date={2024},
     journal={Communications in Mathematical Physics},
      volume={405},
      number={8},
       pages={191},
}

\bib{artymowiczMathematicalTheoryTopological2024}{misc}{
      author={Artymowicz, Adam},
      author={Kapustin, Anton},
      author={Yang, Bowen},
       title={A mathematical theory of topological invariants of quantum spin
  systems},
        date={2024},
        note={preprint, {\href{https://arxiv.org/abs/2410.19287}{\texttt{arXiv:2410.19287[math-ph]}}}},
}

\bib{atiyahThomComplexes1961}{article}{
      author={Atiyah, Michael},
       title={Thom complexes},
        date={1961},
        ISSN={0024-6115},
     journal={Proceedings of the London Mathematical Society. Third Series},
      volume={11},
       pages={291\ndash 310},
      review={\MR{0131880 (24 \#A1727)}},
}

\bib{bachmannQuantizationConductanceGapped2018}{article}{
      author={Bachmann, Sven},
      author={Bols, Alex},
      author={Roeck, Wojciech~De},
      author={Fraas, Martin},
       title={Quantization of conductance in gapped interacting systems},
        date={2018},
        ISSN={1424-0637},
     journal={Ann. Henri Poincar{\'e}},
      volume={19},
      number={3},
       pages={695\ndash 708},
      review={\MR{3769244}},
}

\bib{bachmannManybodyIndexQuantum2020}{article}{
      author={Bachmann, Sven},
      author={Bols, Alex},
      author={Roeck, Wojciech~De},
      author={Fraas, Martin},
       title={A many-body index for quantum charge transport},
        date={2020},
        ISSN={0010-3616},
     journal={Communications in Mathematical Physics},
      volume={375},
      number={2},
       pages={1249\ndash 1272},
      review={\MR{4083882}},
}

\bib{bachmannAutomorphicEquivalenceGapped2012}{article}{
      author={Bachmann, Sven},
      author={Michalakis, Spyridon},
      author={Nachtergaele, Bruno},
      author={Sims, Robert},
       title={Automorphic equivalence within gapped phases of quantum lattice
  systems},
        date={2012},
        ISSN={0010-3616},
     journal={Communications in Mathematical Physics},
      volume={309},
      number={3},
       pages={835\ndash 871},
      review={\MR{2885611}},
}

\bib{bachmannClassificationGchargeThouless2023}{article}{
      author={Bachmann, Sven},
      author={Roeck, Wojciech~De},
      author={Fraas, Martin},
      author={Jappens, Tijl},
       title={A classification of {{$G$-charge Thouless}} pumps in {{1D}}
  invertible states},
        date={2024},
     journal={Communications in Mathematical Physics},
      volume={405},
      number={7},
       pages={157},
}

\bib{baezConvenientCategoriesSmooth2011}{article}{
    author = {Baez, John C.},
    author = {Hoffnung, Alexander E.},
     title = {Convenient categories of smooth spaces},
   journal = {Transactions of the American Mathematical Society},
    volume = {363},
      year = {2011},
    number = {11},
     pages = {5789--5825},
}

\bib{barkeshliClassification2+1DInvertible2022}{article}{
  title = {Classification of $(2+1)$D invertible fermionic topological phases with symmetry},
  author = {Barkeshli, Maissam},
  author = {Chen, Yu-An},
  author = {Hsin, Po-Shen},
  author = {Manjunath, Naren},
  journal = {Physical Review B},
  volume = {105},
  number = {23},
  pages = {235143},
  date = {2022},
  publisher = {American Physical Society},
  doi = {10.1103/PhysRevB.105.235143},
  url = {https://link.aps.org/doi/10.1103/PhysRevB.105.235143},
}

\bib{baumClassifyingSpaceProper1994}{incollection}{
      author= {Baum, Paul},
      author= {Connes, Alain},
      author= {Higson, Nigel},
       title= {Classifying space for proper actions and $K$-theory of group C*-algebras},
  booktitle = {C*-algebras: 1943–1993 (San Antonio, TX, 1993)},
        date={1994},
     series = {Contemp. {{Math}}.},
     volume = {167},
      pages = {240--291},
  publisher = {Amer. Math. Soc., Providence, RI},
        doi = {10.1090/conm/167/1292018},
}

\bib{beardsleyBrauerWallGroupsTruncatedPicard2023}{misc}{
      title={Brauer-Wall Groups and Truncated Picard Spectra of $K$-theory}, 
      author={Beardsley, Jonathan},
      author={Luecke, Kiran},
      author={Morava, Jack},
        date={2023},
        note={preprint, {\href{https://arxiv.org/abs/2306.1011}{\texttt{arXiv:2306.1011[math.KT]}}}},
}

\bib{beaudryHomotopicalFoundationsParametrized2023}{article}{
      author={Beaudry, Agnes},
      author={Hermele, Michael},
      author={Moreno, Juan},
      author={Pflaum, Markus},
      author={Qi, Marvin},
      author={Spiegel, Daniel},
       title={Homotopical foundations of parametrized quantum spin systems},
        date={2024},
     journal={Reviews in Mathematical Physics},
      volume={36},
      number={9},
       pages={2460003},
}

\bib{beaudryClassifyingSpacePhases2025}{misc}{
      author={Beaudry, Agnes},
      author={Hermele, Michael},
      author={Pflaum, Markus~J.},
      author={Qi, Marvin},
      author={Spiegel, Daniel~D.},
      author={Stephen, David~T.},
       title={A classifying space for phases of matrix product states},
        date={2025},
        note={preprint, {\href{https://arxiv.org/abs/2501.14241}{\texttt{arXiv:2501.14241[math-ph]}}}},
}

\if0
\bib{bhardwajSuperselectionSectorsPosets2024}{misc}{
      author={Bhardwaj, Anupama},
      author={Brisky, Tristen},
      author={Chuah, Chian~Yeong},
      author={Kawagoe, Kyle},
      author={Keslin, Joseph},
      author={Penneys, David},
      author={Wallick, Daniel},
       title={Superselection sectors for posets of von {{Neumann}} algebras},
        date={2024},
        note={preprint, {\href{https://arxiv.org/abs/2410.21454}{\texttt{
  arXiv:2410.21454 [math.OA]}}}},
}
\fi

\bib{bhatiaMatrixAnalysis1997}{book}{
      author={Bhatia, Rajendra},
       title={Matrix Analysis},
      series={Graduate Texts in Mathematics},
   publisher={Springer New York},
     address={New York},
        date={1997},
        ISBN={978-1-4612-6857-4},
}

\bib{blackadarTheoryOperatorAlgebras1998}{book}{
      author={Blackadar, Bruce},
       title={{$K$}-theory for operator algebras},
     edition={Second},
      series={Mathematical {{Sciences Research Institute Publications}}},
   publisher={Cambridge University Press, Cambridge},
        date={1998},
      volume={5},
        ISBN={0-521-63532-2},
      review={\MR{1656031 (99g:46104)}},
}

\bib{blockAperiodicTilingsPositive1992}{article}{
      author={Block, Jonathan},
      author={Weinberger, Shmuel},
       title={Aperiodic tilings, positive scalar curvature and amenability of
  spaces},
        date={1992},
        ISSN={0894-0347,1088-6834},
     journal={Journal of the American Mathematical Society},
      volume={5},
      number={4},
       pages={907\ndash 918},
      review={\MR{1145337}},
}

\bib{bourneClassificationSymmetryProtected2021}{article}{
      author={Bourne, Chris},
      author={Ogata, Yoshiko},
       title={The classification of symmetry protected topological phases of
  one-dimensional fermion systems},
        date={2021},
     journal={Forum of Mathematics. Sigma},
      volume={9},
       pages={Paper No. e25, 45},
      review={\MR{4235164}},
}

\bib{borceuxHandbookCategoricalAlgebra1994}{book}{
      author={Borceux, Francis},
       title={Handbook of categorical algebra},
      series={Encyclopedia of Mathematics and Its Applications},
   publisher={Cambridge University Press},
     address={Cambridge},
        date={1994},
}

\bib{bratteliOperatorAlgebrasQuantum1987}{book}{
      author={Bratteli, Ola},
      author={Robinson, Derek~W.},
       title={Operator algebras and quantum statistical mechanics. 1},
     edition={2},
      series={Texts and Monographs in Physics},
   publisher={Springer-Verlag, New York},
        date={1987},
        ISBN={0-387-17093-6},
      review={\MR{887100}},
}

\bib{bratteliOperatorAlgebrasQuantum1997}{book}{
      author={Bratteli, Ola},
      author={Robinson, Derek~W.},
       title={Operator algebras and quantum statistical mechanics. 2},
     edition={2},
      series={Texts and Monographs in Physics},
   publisher={Springer-Verlag, Berlin},
        date={1997},
        ISBN={3-540-61443-5},
      review={\MR{1441540}},
}

\bib{bravyiShortProofStability2011}{article}{
      author={Bravyi, Sergey},
      author={Hastings, Matthew~B.},
       title={A {{Short Proof}} of {{Stability}} of {{Topological Order}} under
  {{Local Perturbations}}},
        date={2011},
        ISSN={1432-0916},
     journal={Communications in Mathematical Physics},
      volume={307},
      number={3},
       pages={609\ndash 627},
}

\bib{bravyiTopologicalQuantumOrder2010}{article}{
      author={Bravyi, Sergey},
      author={Hastings, Matthew~B.},
      author={Michalakis, Spyridon},
       title={Topological quantum order: {{Stability}} under local
  perturbations},
        date={2010},
        ISSN={0022-2488},
     journal={Journal of Mathematical Physics},
      volume={51},
      number={9},
       pages={093512},
  eprint={https://pubs.aip.org/aip/jmp/article-pdf/doi/10.1063/1.3490195/14759707/093512{\textbackslash}\_1{\textbackslash}\_online.pdf},
}

\bib{bredonIntroductionCompactTransformation1972}{book}{
      author={Bredon, Glen~E.},
       title={Introduction to compact transformation groups},
      series={Pure and Applied Mathematics},
   publisher={Academic Press, New York-London},
        date={1972},
      volume={Vol. 46},
      review={\MR{413144}},
}

\bib{brownAlgebrasFinitedimensionalApproximations2008}{book}{
      author={Brown, Nathanial~P.},
      author={Ozawa, Narutaka},
       title={C*-algebras and finite-dimensional approximations},
      series={Graduate Studies in Mathematics},
   publisher={American Mathematical Society, Providence, RI},
        date={2008},
      volume={88},
        ISBN={978-0-8218-4381-9 0-8218-4381-8},
      review={\MR{2391387}},
}

\bib{carlssonControlledAlgebraNovikov1995}{article}{
      author={Carlsson, Gunnar},
      author={Pedersen, Erik~Kj{\ae}r},
       title={Controlled algebra and the novikov conjectures for {{K-}} and
  {{L-theory}}},
        date={1995},
        ISSN={0040-9383},
     journal={Topology},
      volume={34},
      number={3},
       pages={731\ndash 758},
}

\bib{carvalhoClassificationSymmetryProtected2024}{misc}{
      author={Carvalho, Bruno de~Oliveira},
      author={Roeck, Wojciech~De},
      author={Jappens, Tijl},
       title={Classification of symmetry protected states of quantum spin
  chains for continuous symmetry groups},
        date={2024},
        note={preprint, {\href{https://arxiv.org/abs/2409.01112}{\texttt{arXiv:2409.01112[math-pj]}}}},
}

\bib{chenLocalUnitaryTransformation2010}{article}{
      author={Chen, Xie},
      author={Gu, Zheng-Cheng},
      author={Wen, Xiao-Gang},
       title={Local unitary transformation, long-range quantum entanglement,
  wave function renormalization, and topological order},
        date={2010},
     journal={Physical Review B},
      volume={82},
      number={15},
       pages={155138},
}

\bib{charlapBieberbachGroupsFlat1986}{book}{
    author={Charlap, Leonard~S.},
    title={Bieberbach Groups and Flat Manifolds},
    year={1986},
    publisher={Springer New York},
    address={New York, NY},
    pages={xiii+242},
    isbn={978-1-4613-8687-2},
    doi={10.1007/978-1-4613-8687-2_1},
}

\bib{chenClassificationGappedSymmetric2011}{article}{
      author={Chen, Xie},
      author={Gu, Zheng-Cheng},
      author={Wen, Xiao-Gang},
       title={Classification of gapped symmetric phases in one-dimensional spin
  systems},
        date={2011},
     journal={Physical Review B},
      volume={83},
      number={3},
       pages={035107},
}

\bib{chenSymmetryprotectedTopologicalOrders2012}{article}{
      author={Chen, Xie},
      author={Gu, Zheng-Cheng},
      author={Liu, Zheng-Xin},
      author={Wen, Xiao-Gang},
       title={Symmetry-protected topological orders in interacting bosonic
  systems},
        date={2012},
        ISSN={0036-8075,1095-9203},
     journal={Science (New York, N.Y.)},
      volume={338},
      number={6114},
       pages={1604\ndash 1606},
      review={\MR{3025080}},
}

\bib{chenSymmetryProtectedTopological2013}{article}{
      author={Chen, Xie},
      author={Gu, Zheng-Cheng},
      author={Liu, Zheng-Xin},
      author={Wen, Xiao-Gang},
       title={Symmetry protected topological orders and the group cohomology of
  their symmetry group},
        date={2013},
     journal={Physical Review B},
      volume={87},
      number={15},
       pages={155114},
}

\bib{christensenHomotopyTheoryDiffeological2014}{article}{
      author={Christensen, J.~Daniel},
      author={Wu, Enxin},
       title={The homotopy theory of diffeological spaces},
        date={2014},
        ISSN={1076-9803},
     journal={New York Journal of Mathematics},
      volume={20},
       pages={1269\ndash 1303},
      review={\MR{3312059}},
}

\bib{davisSpacesCategoryAssembly1998}{article}{
      author={Davis, James~F.},
      author={L{\"u}ck, Wolfgang},
       title={Spaces over a category and assembly maps in isomorphism
  conjectures in {$K$}- and {$L$}-theory},
        date={1998},
        ISSN={0920-3036},
     journal={{$K$}-Theory. An Interdisciplinary Journal for the Development,
  Application, and Influence of {$K$}-Theory in the Mathematical Sciences},
      volume={15},
      number={3},
       pages={201\ndash 252},
      review={\MR{1659969 (99m:55004)}},
}

\bib{debrayInvertiblePhasesMixed2021}{misc}{
      author={Debray, Arun},
       title={Inverible phases for mixed spatial symmetries and the fermionic crystalline equivalence principle},
        date={2021},
        note={preprint, {\href{https://arxiv.org/abs/2102.02941}{\texttt{arXiv:2102.02941[math-ph]}}}},
}

\bib{decornulierIsometricGroupActions2007}{article}{
      author={de~Cornulier, Yves},
      author={Tessera, Romain},
      author={Valette, Alain},
       title={Isometric Group Actions on Hilbert Spaces: Growth of Cocycles},
        date={2007},
     journal={Geometric and Functional Analysis},
      volume={17},
      number={3},
       pages={770\ndash 792},
}

\bib{doldDualityTraceTransfer1983}{incollection}{
      author={Dold, Albrecht},
      author={Puppe, Dieter},
       title={Duality, trace and transfer},
        date={1983},
   booktitle={Ross{\'i}iskaya {{Akademiya Nauk}}},
      volume={154},
       pages={81\ndash 97},
      review={\MR{733829}},
}

\bib{doplicherStandardSplitInclusions1984}{article}{
      author={Doplicher, Sergio},
      author={Longo, Roberto},
       title={Standard and split inclusions of von {{Neumann}} algebras},
        date={1984},
        ISSN={0020-9910,1432-1297},
     journal={Inventiones Mathematicae},
      volume={75},
      number={3},
       pages={493\ndash 536},
      review={\MR{735338}},
}

\bib{dwyerSingularFunctorsRealization1984}{article}{
      author={Dwyer, William~G.},
      author={Kan, Daniel~M.},
       title={Singular functors and realization functors},
        date={1984},
     journal={Indagationes Mathematicae (Proceedings)},
      volume={87},
      number={2},
       pages={147\ndash 153},
}

\bib{elmendorfSystemsFixedPoint1983}{article}{
      author={Elmendorf, Anthony~D.},
       title={Systems of fixed point sets},
        date={1983},
        ISSN={0002-9947, 1088-6850},
     journal={Transactions of the American Mathematical Society},
      volume={277},
      number={1},
       pages={275\ndash 284},
}

\bib{eloklUniversalCoarseGeometry2024}{misc}{
      author={Elokl, Ali},
      author={Jones, Corey},
       title={Universal coarse geometry of spin systems},
        date={2024},
        note={preprint, {\href{https://arxiv.org/abs/2411.07912}{\texttt{arXiv:2411.07912 [quant-ph]}}}},
}

\if0
\bib{engelkingDimensionTheory1978}{book}{
      author={Engelking, Ryszard},
       title={Dimension theory},
      series={North-Holland Mathematical Library},
   publisher={North-Holland Publishing Co., Amsterdam-Oxford-New York;
  PWN---Polish Scientific Publishers, Warsaw},
        date={1978},
      volume={19},
        ISBN={0-444-85176-3},
      review={\MR{482697}},
}
\fi

\bib{evansQuantumSymmetriesOperator1998}{book}{
      author={Evans, David~E.},
      author={Kawahigashi, Yasuyuki},
       title={Quantum symmetries on operator algebras},
      series={Oxford {{Mathematical Monographs}}},
   publisher={The Clarendon Press, Oxford University Press, New York},
        date={1998},
        ISBN={0-19-851175-2},
      review={\MR{1642584 (99m:46148)}},
}

\bib{fannesFinitelyCorrelatedStates1992}{article}{
      author={Fannes, Mark},
      author={Nachtergaele, Bruno},
      author={Werner, Richard~F.},
       title={Finitely correlated states on quantum spin chains},
        date={1992},
     journal={Communications in Mathematical Physics},
      volume={144},
      number={3},
       pages={443\ndash 490},
}

\bib{fannesFinitelyCorrelatedPure1994}{article}{
      author={Fannes, Mark},
      author={Nachtergaele, Bruno},
      author={Werner, Richard~F.},
       title={Finitely correlated pure states},
        date={1994},
        ISSN={0022-1236,1096-0783},
     journal={Journal of Functional Analysis},
      volume={120},
      number={2},
       pages={511\ndash 534},
      review={\MR{1266319}},
}

\bib{farrellIsomorphismConjecture1993}{article}{
      author={Farrell, Tom},
      author={Jones, Lowell},
       title={Isomorphism conjectures in Algebraic $K$-theory},
        date={1993},
     journal={Journal of the American Mathematical Society},
      volume={6},
      number={2},
       pages={249\ndash 493},
}

\bib{fidkowskiEffectsInteractionsTopological2010}{article}{
      author={Fidkowski, Lukasz},
      author={Kitaev, Alexei},
       title={Effects of interactions on the topological classification of free
  fermion systems},
        date={2010},
     journal={Physical Review B},
      volume={81},
      number={13},
       pages={134509},
}

\bib{fidkowskiTopologicalPhasesFermions2011}{article}{
      author={Fidkowski, Lukasz},
      author={Kitaev, Alexei},
       title={Topological phases of fermions in one dimension},
        date={2011},
     journal={Physical Review B},
      volume={83},
      number={7},
       pages={075103},
}

\bib{freedLecturesTwistedTheoryOrientifolds2012}{misc}{
    author={Freed, Daniel~S.},
    title={Lectures on twisted K-theory and orientifolds},
    date={2012},
    note={preprint, \url{https://web.ma.utexas.edu/users/dafr/vienna.pdf}},
}

\bib{freedWignerTheorem2012}{inproceedings}{
      author={Freed, Daniel~S.},
       title={On {{Wigner}}'s theorem},
        date={2012},
   booktitle={Proceedings of the {{Freedman Fest}}},
      series={Geometry \& Topology Monographs},
      volume={18},
   publisher={Geometry \& Topology Publication, Coventry},
       pages={83\ndash 89},
      review={\MR{3084233}},
}

\bib{freedShortrangeEntanglementInvertible2014}{misc}{
      author={Freed, Daniel~S.},
       title={Short-range entanglement and invertible field theories},
        date={2014},
        note={preprint, {\href{https://arxiv.org/abs/1406.7278}{\texttt{arXiv:1406.7278[cond- mat.str-el]}}}},
}

\bib{freedAnomaliesInvertibleField2014}{incollection}{
    title={Anomalies and invertible field theories}, 
    author={Freed, Daniel~S.},
    date={2014},
    booktitle={String-Math 2013}, 
    pages={25\ndash 45}, 
    publisher={Amererican Mathematical Society, Providence, RI},
}

\bib{freedInvertiblePhasesMatter2020}{article}{
      author={Freed, Daniel~S.},
      author={Hopkins, Michael~J.},
       title={Invertible phases of matter with spatial symmetry},
        date={2020},
     journal={Advances in Theoretical and Mathematical Physics},
      volume={24},
      number={7},
       pages={1773\ndash 1788},
}

\bib{freedReflectionPositivityInvertible2021}{article}{
      author={Freed, Daniel~S.},
      author={Hopkins, Michael~J.},
       title={Reflection positivity and invertible topological phases},
        date={2021},
     journal={Geometry \& Topology},
      volume={25},
      number={3},
       pages={1165\ndash 1330},
}

\bib{freedTwistedEquivariantMatter2013}{article}{
      author={Freed, Daniel~S.},
      author={Moore, Gregory~W.},
       title={Twisted equivariant matter},
        date={2013},
        ISSN={1424-0637},
     journal={Annales Henri Poincar{\'e}. A Journal of Theoretical and
  Mathematical Physics},
      volume={14},
      number={8},
       pages={1927\ndash 2023},
      review={\MR{3119923}},
}

\bib{fujiiKgroupsProjectiveSpaces1967}{article}{
      author={Fujii, Michikazu},
       title={K-groups of projective spaces},
        date={1967},
     journal={Osaka Journal of Mathematics},
      volume={4},
      number={1},
       pages={141\ndash 149},
}

\bib{gaiottoSymmetryProtectedTopological2019}{article}{
      author={Gaiotto, Davide},
      author={{Johnson-Freyd}, Theo},
       title={Symmetry protected topological phases and generalized
  cohomology},
        date={2019},
        ISSN={1029-8479},
     journal={Journal of High Energy Physics},
      volume={2019},
      number={5},
       pages={7},
}

\bib{galatiusHomotopyTypeCobordism2009}{article}{
      author={Galatius, S{\o}ren},
      author={Tillmann, Ulrike},
      author={Madsen, Ib},
      author={Weiss, Michael},
       title={The homotopy type of the cobordism category},
        date={2009},
        ISSN={0001-5962,1871-2509},
     journal={Acta Mathematica},
      volume={202},
      number={2},
       pages={195\ndash 239},
      review={\MR{2506750}},
}

\bib{goreskyTriangulationStratifiedObjects1978}{article}{
      author={Goresky, R.~Mark},
       title={Triangulation of stratified objects},
        date={1978},
        ISSN={00029939, 10886826},
     journal={Proceedings of the American Mathematical Society},
      volume={72},
      number={1},
       pages={193\ndash 200},
      eprint={2042563},
}

\bib{gradyDeformationClassesInvertible2023}{article}{
      author={Grady, Daniel},
       title={Deformation classes of invertible field theories and the
  {{Freed}}--{{Hopkins}} conjecture},
        date={2023},
        note={preprint, {\href{https://arxiv.org/abs/2310.15866}{\texttt{arXiv:2310.15866[math.AT]}}}},
}

\bib{gromovGroupsPolynomialGrowth1981}{article}{
      author={Gromov, Mikhael},
       title={Groups of polynomial growth and expanding maps},
        date={1981},
     journal={Institut des Hautes \'{E}tudes Scientifiques. Publications Math\'{e}matiques},
      number={53},
       pages={53\ndash 73},
}

\bib{grossIndexTheoryOne2009}{article}{
      author={Gross, David},
      author={Nesme, Vincent},
      author={Vogts, Holger},
      author={Werner, Reinhard},
       title={Index {{Theory}} of {{One Dimensional Quantum Walks}} and
  {{Cellular Automata}}},
        date={2009},
     journal={Communications in Mathematical Physics},
      volume={310},
}

\bib{guTensorentanglementfilteringRenormalizationApproach2009}{article}{
  title = {Tensor-entanglement-filtering renormalization approach and symmetry-protected topological order},
  author = {Gu, Zheng-Cheng},
  author = {Wen, Xiao-Gang},
  journal = {Physical Review B},
  volume = {80},
  number = {15},
  pages = {155131},
  year = {2009},
  publisher = {American Physical Society},
  doi = {10.1103/PhysRevB.80.155131},
  url = {https://link.aps.org/doi/10.1103/PhysRevB.80.155131}
}

\bib{guSymmetryprotectedTopologicalOrders2014}{article}{
      author={Gu, Zheng-Cheng},
      author={Wen, Xiao-Gang},
       title={Symmetry-protected topological orders for interacting fermions:
  {{Fermionic}} topological nonlinear {$\sigma$} models and a special group
  supercohomology theory},
        date={2014},
     journal={Physical Review B},
      volume={90},
      number={11},
       pages={115141},
}

\bib{guillouEnrichedModelCategories2020}{article}{
      author={Guillou, Bertrand~J.},
      author={May, J.~Peter},
       title={Enriched model categories and presheaf categories},
        date={2020},
        ISSN={1076-9803},
     journal={New York Journal of Mathematics},
      volume={26},
       pages={37\ndash 91},
      review={\MR{4047399}},
}
\bib{haahInvertibleSubalgebras2023}{article}{
      author={Haah, Jeongwan},
       title={Invertible subalgebras},
        date={2023},
        ISSN={0010-3616,1432-0916},
     journal={Communications in Mathematical Physics},
      volume={403},
      number={2},
       pages={661\ndash 698},
      review={\MR{4645726}},
}

\bib{haahNontrivialQuantumCellular2023}{article}{
      author={Haah, Jeongwan},
      author={Fidkowski, Lukasz},
      author={Hastings, Matthew~B.},
       title={Nontrivial quantum cellular automata in higher dimensions},
        date={2023},
        ISSN={0010-3616,1432-0916},
     journal={Communications in Mathematical Physics},
      volume={398},
      number={1},
       pages={469\ndash 540},
      review={\MR{4544190}},
}

\bib{haldaneContinuumDynamics1D1983}{article}{
    title = {Continuum dynamics of the 1-D Heisenberg antiferromagnet: Identification with the O(3) nonlinear sigma model},
    author = {Haldane, F.~Duncan~M.},
    journal = {Physics Letters A},
    volume = {93},
    number = {9},
    pages = {464-468},
    year = {1983},
}

\bib{hastingsLiebschultzmattisHigherDimensions2004}{article}{
      author={Hastings, Matthew~B.},
       title={Lieb-schultz-mattis in higher dimensions},
        date={2004},
     journal={Physical Review B},
      volume={69},
      number={10},
       pages={104431},
}

\bib{hastingsQuasiadiabaticContinuationQuantum2005}{article}{
      author={Hastings, Matthew~B.},
      author={Wen, Xiao-Gang},
       title={Quasiadiabatic continuation of quantum states: {{The}} stability
  of topological ground-state degeneracy and emergent gauge invariance},
        date={2005},
        ISSN={1098-0121},
     journal={Physical Review. B, Condensed Matter and Materials Physics},
      volume={72},
      number={4},
}

\bib{hatcherAlgebraicTopology2002}{book}{
      author={Hatcher, Allen},
       title={Algebraic Topology},
   publisher={Cambridge University Press, Cambridge},
        date={2002},
        ISBN={0-521-79540-0},
}

\bib{higsonTheoryKKTheory2001}{article}{
      author={Higson, Nigel},
      author={Kasparov, Gennadi},
       title={$E$-theory and $KK$-theory for groups which act properly and isometrically on Hilbert space},
        date={2001},
     journal={Inventiones Mathematicae},
      volume={144},
      number={1},
       pages={23\ndash 74},
}

\bib{higsonCounterexamplesBaumConnesConjecture2002}{article}{
      author={Higson, Nigel},
      author={Lafforgue, Vincent},
      author={Skandalis, Georges},
       title={Counterexamples to the {{Baum-Connes}} conjecture},
        date={2002},
        ISSN={1016-443X},
     journal={Geometric and Functional Analysis},
      volume={12},
      number={2},
       pages={330\ndash 354},
      review={\MR{1911663 (2003g:19007)}},
}

\bib{higsonAnalyticHomology2000}{book}{
      author={Higson, Nigel},
      author={Roe, John},
       title={Analytic {$K$}-homology},
      series={Oxford {{Mathematical Monographs}}},
   publisher={Oxford University Press, Oxford},
        date={2000},
        ISBN={0-19-851176-0},
      review={\MR{1817560 (2002c:58036)}},
}

\bib{higsonCoarseBaumConnesConjecture1995}{incollection}{
      author={Higson, Nigel},
      author={Roe, John},
       title={On the coarse {{Baum-Connes}} conjecture},
        date={1995},
   booktitle={Novikov conjectures, index theorems and rigidity,
  {{Vol}}.{\textbackslash} 2 ({{Oberwolfach}}, 1993)},
      series={London {{Math}}. {{Soc}}. {{Lecture Note Ser}}.},
      volume={227},
   publisher={Cambridge University Press, Cambridge},
       pages={227\ndash 254},
      review={\MR{1388312 (97f:58127)}},
}

\bib{higsonCoarseMayerVietorisPrinciple1993}{article}{
      author={Higson, Nigel},
      author={Roe, John},
      author={Yu, Guoliang},
       title={A coarse {{Mayer-Vietoris}} principle},
        date={1993},
        ISSN={0305-0041},
     journal={Mathematical Proceedings of the Cambridge Philosophical Society},
      volume={114},
      number={1},
       pages={85\ndash 97},
      review={\MR{1219916 (95c:19006)}},
}
\bib{hirschhornModelCategoriesTheir2003}{book}{
      author={Hirschhorn, Philip~S.},
       title={Model categories and their localizations},
      series={Mathematical Surveys and Monographs},
   publisher={American Mathematical Society, Providence, RI},
        date={2003},
      volume={99},
        ISBN={0-8218-3279-4},
      review={\MR{1944041}},
}

\bib{hoveyModelCategories1999}{book}{
      author={Hovey, Mark},
       title={Model categories},
      series={Mathematical Surveys and Monographs},
   publisher={American Mathematical Society, Providence, RI},
        date={1999},
      volume={63},
        ISBN={0-8218-1359-5},
      review={\MR{1650134}},
}

\bib{iglesias-zemmourDiffeology2013}{book}{
      author={{Iglesias-Zemmour}, Patrick},
       title={Diffeology},
      series={Mathematical Surveys and Monographs},
   publisher={American Mathematical Society, Providence, RI},
        date={2013},
      volume={185},
        ISBN={978-0-8218-9131-5},
      review={\MR{3025051}},
}

\bib{illmanExistenceUniquenessEquivariant2000}{article}{
      author={Illman, S{\"o}ren},
       title={Existence and uniqueness of equivariant triangulations of smooth
  proper {{G-manifolds}} with some applications to equivariant {{Whitehead}}
  torsion},
        date={2000},
     journal={Journal f{\"u}r die reine und angewandte Mathematik},
      volume={2000},
      number={524},
       pages={129\ndash 183},
}

\bib{illmanEquivariantTriangulationTheorem1983}{article}{
      author={Illman, S{\"o}ren},
       title={The equivariant triangulation theorem for actions of compact
  {{Lie}} groups},
        date={1983},
        ISSN={1432-1807},
     journal={Mathematische Annalen},
      volume={262},
      number={4},
       pages={487\ndash 501},
}

\bib{jacobsonBasicAlgebra1985}{book}{
      author={Jacobson, Nathan},
       title={Basic algebra. {{I}}},
     edition={2},
   publisher={{W. H. Freeman and Company, New York}},
        date={1985},
        ISBN={0-7167-1480-9},
      review={\MR{780184}},
}

\bib{jappensSPTIndicesEmerging2024}{article}{
      author={Jappens, Tijl},
       title={{{SPT Indices Emerging From Translation Invariance}} in
  {{Two-Dimensional Quantum Spin Systems}}},
        date={2024},
        ISSN={1432-0916},
     journal={Communications in Mathematical Physics},
      volume={405},
      number={1},
       pages={8},
}

\bib{johnson-freydTopologicalOrdersDimensions2022}{article}{
      author={{Johnson-Freyd}, Theo},
      author={Yu, Matthew},
       title={Topological orders in $(4+1)$-dimensions},
        date={2022},
        ISSN={2542-4653},
     journal={SciPost Physics},
      volume={13},
      number={3},
       pages={Paper No. 068, 20},
      review={\MR{4492330}},
}

\bib{kadisonFundamentalsTheoryOperator1997}{book}{
      author={Kadison, Richard V.},
      author={Ringrose, John R.},
       title={Fundamentals of the Theory of Operator Algebras. Volume II: Advanced Theory},
      series={Graduate Studies in Mathematics},
   publisher={American Mathematical Society, Providence, RI},
        date={1997},
      volume={16},
        ISBN={978-0-8218-0820-7},
}

\bib{kapustinBosonicTopologicalInsulators2014}{misc}{
      author={Kapustin, Anton},
       title={Bosonic topological insulators and paramagnets: A view from
  cobordisms},
        date={2014},
        note={preprint, {\href{https://arxiv.org/abs/1404.6659}{\texttt{arXiv:1404.6659[cond-mat.str-el]}}}},
}

\bib{kapustinSymmetryProtectedTopological2014}{misc}{
      author={Kapustin, Anton},
       title={Symmetry protected topological phases, anomalies, and cobordisms:
  {{Beyond}} group cohomology},
        date={2014},
        note={preprint, {\href{https://arxiv.org/abs/1403.1467}{\texttt{arXiv:1403.1467[cond-mat.str-el]}}}},
}

\bib{kapustinLocalNoetherTheorem2022}{article}{
      author={Kapustin, Anton},
      author={Sopenko, Nikita},
       title={Local {{Noether}} theorem for quantum lattice systems and
  topological invariants of gapped states},
        date={2022},
     journal={Journal of Mathematical Physics},
      volume={63},
      number={9},
       pages={091903},
      eprint={https://doi.org/10.1063/5.0085964},
}

\bib{kapustinAnomalousSymmetriesQuantum2024}{misc}{
      author={Kapustin, Anton},
      author={Sopenko, Nikita},
       title={Anomalous symmetries of quantum spin chains and a generalization
  of the {{Lieb-Schultz-Mattis}} theorem},
        date={2024},
        note={preprint, {\href{https://arxiv.org/abs/2401.02533}{\texttt{arXiv:2401.02533[math-ph]}}}},
}

\bib{kapustinClassificationInvertiblePhases2021}{article}{
      author={Kapustin, Anton},
      author={Sopenko, Nikita},
      author={Yang, Bowen},
       title={A classification of invertible phases of bosonic quantum lattice
  systems in one dimension},
        date={2021},
     journal={Journal of Mathematical Physics},
      volume={62},
      number={8},
       pages={081901},
      eprint={https://doi.org/10.1063/5.0055996},
}

\bib{kapustinHigherdimensionalGeneralizationsBerry2020}{article}{
      author={Kapustin, Anton},
      author={Spodyneiko, Lev},
       title={Higher-dimensional generalizations of {{Berry}} curvature},
        date={2020},
        ISSN={2469-9969},
     journal={Physical Review B},
      volume={101},
      number={23},
}

\bib{kapustinHigherdimensionalGeneralizationsThouless2020}{misc}{
      author={Kapustin, Anton},
      author={Spodyneiko, Lev},
       title={Higher-dimensional generalizations of the {{Thouless}} charge
  pump},
        date={2020},
        note={preprint, {\href{https://arxiv.org/abs/2003.09519}{\texttt{arXiv:2003. 09519[cond-mat.str-el]}}}},
}

\bib{kapustinFermionicSPTPhases2017}{article}{
      author={Kapustin, Anton},
      author={Thorngren, Ryan},
       title={Fermionic {{SPT}} phases in higher dimensions and bosonization},
        date={2017},
        ISSN={1029-8479},
     journal={Journal of High Energy Physics},
      volume={2017},
      number={10},
       pages={80},
}

\bib{kapustinFermionicSymmetryProtected2015}{article}{
      author={Kapustin, Anton},
      author={Thorngren, Ryan},
      author={Turzillo, Alex},
      author={Wang, Zitao},
       title={Fermionic symmetry protected topological phases and cobordisms},
        date={2015},
        ISSN={1029-8479},
     journal={Journal of High Energy Physics},
      volume={2015},
      number={12},
       pages={1\ndash 21},
}

\bib{kasparovOperatorFunctorExtensions1980}{article}{
    author={Kasparov, Gennadi},
    title={The Operator K-functor and Extensions of C*-algebras},
    journal={Mathematics of the USSR-Izvestiya},
    date={1981},
    volume={16},
    number={3},
    pages={513\ndash 572},
}

\bib{kiharaQuillenEquivalencesModel2017}{misc}{
      author={Kihara, Hiroshi},
       title={Quillen equivalences between the model categories of smooth
  spaces, simplicial sets, and arc-generated spaces},
        date={2017},
        note={preprint, {\href{https://arxiv.org/abs/1702.04070}{\texttt{arXiv:1702.04070[math.AT]}}}},
}

\bib{kiharaModelCategoryDiffeological2019}{article}{
      author={Kihara, Hiroshi},
       title={Model category of diffeological spaces},
        date={2019},
        ISSN={1512-2891},
     journal={Journal of Homotopy and Related Structures},
      volume={14},
      number={1},
       pages={51\ndash 90},
}

\bib{kitaevAnyonsExactlySolved2006}{article}{
      author={Kitaev, Alexei},
       title={Anyons in an exactly solved model and beyond},
        date={2006},
        ISSN={0003-4916},
     journal={Annals of Physics},
      volume={321},
      number={1},
       pages={2\ndash 111},
}

\bib{kitaevPeriodicTableTopological2009}{article}{
      author={Kitaev, Alexei},
       title={Periodic table for topological insulators and superconductors},
        date={2009},
     journal={AIP Conference Proceedings},
      volume={1134},
      number={1},
       pages={22\ndash 30},
      eprint={https://aip.scitation.org/doi/pdf/10.1063/1.3149495},
}

\bib{kitaevTopologicalClassificationManybody2011}{misc}{
      author={Kitaev, Alexei},
       title={Toward a topological classification of many-body quantum states
  with short-range entanglement},
        date={2011},
        note={Talk at: Topological Quantum Computing Workshop, Simons Center
  for Geometry and Physics, Stony Brook University, Stony Brook, New York.
  {\url{http://scgp.stonybrook.edu/archives/1087}}.},
}

\bib{kitaevClassificationShortrangeEntangled2013}{misc}{
      author={Kitaev, Alexei},
       title={On the classification of short-range entangled states},
        date={2013},
        note={Talk at Simons Center.
  {\url{http://scgp.stonybrook.edu/video_portal/video.php?id=2010}}.},
}

\bib{kitaevHomotopytheoreticApproachSPT2015}{misc}{
      author={Kitaev, Alexei},
       title={Homotopy-theoretic approach to {{SPT}} phases in action:
  {$\mathbb{Z}_{16}$} classification of three-dimensional superconductors},
        date={2015},
        note={Talk at: Symmetry and Topology in Quantum Matter Workshop,
  Institute for Pure and Applied Mathematics, University of California.
  {\url{http://www.ipam.ucla.edu/programs/workshops/symmetry-and-topology-in-quantum-matter/}}.},
}

\bib{kitaevTopologicalQuantumPhases2019}{misc}{
    author={Kitaev, Alexei},
    title = {Topological quantum phases},
    date = {2019},
    note = {Talk at IAS, \url{https://www.ias.edu/video/special/2019/1125-AlexeiKitaev}.},
}

\bib{kitaevTopologyShortRangeEntangled2023}{misc}{
      author={Kitaev, Alexei},
       title={Topology of short-range entangled (SRE) phases},
        date={2023},
        note={Talk at: A Panorama of Homotopy Theory --- A Conference in Honour of Mike Hopkins,
  {\url{https://www.maths.ox.ac.uk/groups/topology/panorama-homotopy-theory/links-recordings-lectures}}.},
}

\bib{kockCochainFunctorsGeneral1967}{article}{
      author={Kock, Anders},
      author={Kristensen, Leif},
      author={Madsen, Ib},
       title={Cochain functors for general cohomology theories. {{I}}, {{II}}},
        date={1967},
        ISSN={0025-5521,1903-1807},
     journal={Mathematica Scandinavica},
      volume={20},
       pages={131\ndash 150; 151\ndash 176},
      review={\MR{214056}},
}

\bib{kumjianBrauerGroupLocally1998}{article}{
      author={Kumjian, Alexander},
      author={Muhly, Paul~S.},
      author={Renault, Jean~N.},
      author={Williams, Dana~P.},
       title={The {{Brauer}} group of a locally compact groupoid},
        date={1998},
        ISSN={0002-9327},
     journal={American Journal of Mathematics},
      volume={120},
      number={5},
       pages={901\ndash 954},
      review={\MR{1646047 (2000b:46122)}},
}

\bib{kubotaNotesTwistedEquivariant2016}{article}{
      author={Kubota, Yosuke},
       title={Notes on twisted equivariant {{K-theory}} for {{C}}*-algebras},
        date={2016},
        ISSN={0129-167X},
     journal={International Journal of Mathematics},
      volume={27},
      number={6},
       pages={1650058, 28},
      review={\MR{3516985}},
}

\bib{kubotaControlledTopologicalPhases2017}{article}{
      author={Kubota, Yosuke},
       title={Controlled topological phases and bulk-edge correspondence},
        date={2017},
        ISSN={0010-3616},
     journal={Communications in Mathematical Physics},
      volume={349},
      number={2},
       pages={493\ndash 525},
      review={\MR{3594362}},
}

\bib{kubotaStableHomotopyTheory2025a}{misc}{
      author={Kubota, Yosuke},
       title={Stable homotopy theory of invertible gapped quantum spin systems II: Invertible gapped defects},
        note={in preparation},
}

\bib{kubotaStableHomotopyTheory2025b}{article}{
      author={Kubota, Yosuke},
       title={Stable homotopy theory of invertible gapped quantum spin systems III: Morphisms of spectra},
        note={in preparation},
}

\bib{kubotaStableHomotopyTheory2025c}{article}{
      author={Kubota, Yosuke},
       title={Smooth homotopy of matrix product states and translation invariant invertible phases in dimension $1$},
        note={in preparation},
}

\bib{kubotaDixmierDouadyTheoryHigher}{article}{
      author={Kubota, Yosuke},
      author={Ogata, Yoshiko},
       title={The {{Dixmier--Douady}} theory of the higher order {{Berry}}
  curvature class},
        note={in preparation},
}

\bib{leeConnectionFreefermionInteracting2024}{misc}{
      author={Lee, Chen-Shen},
      author={Shiozaki, Ken},
      author={Hsieh, Chang-Tse},
       title={Connection between free-fermion and interacting crystalline
  symmetry-protected topological phases},
        date={2024},
        note={preprint, {\href{http://arxiv.org/abs/2411.19287}{\texttt{arXiv:2411.19287[cond-mat.str-el]}}}},
}

\bib{leeCrystallineequivalentTopologicalPhases2024}{misc}{
      author={Lee, Chen-Shen},
      author={Shiozaki, Ken},
      author={Hsieh, Chang-Tse},
       title={Crystalline-equivalent topological phases of many-body fermionic
  systems in one dimension},
        date={2024},
        note={preprint, {\href{http://arxiv.org/abs/2411.19268}{\texttt{arXiv:2411.19268[cond-mat.str-el]}}}},
}

\bib{ludewigQuantizationConductanceCoarse2023}{misc}{
      author={Ludewig, Matthias},
      author={Thiang, Guo~Chuan},
       title={Quantization of conductance and the coarse cohomology of
  partitions},
        date={2023},
        note={preprint, {\href{https://arxiv.org/abs/2308.02819}{\texttt{arXiv:2308.02819[math-ph]}}}},
}
\bib{luckSurveyClassifyingSpaces2005}{incollection}{
    author={L{\"u}ck, Wolfgang},
    editor={Bartholdi, Laurent},
    editor={Ceccherini-Silberstein, Tullio},
    editor={Smirnova-Nagnibeda, Tatiana},
    editor={Zuk, Andrzej},
    title={Survey on Classifying Spaces for Families of Subgroups},
    booktitle={Infinite Groups: Geometric, Combinatorial and Dynamical Aspects},
    date={2005},
    publisher={Birkh{\"a}user Basel},
    address={Basel},
    pages={269\ndash 322},
    isbn={978-3-7643-7447-1},
    doi={10.1007/3-7643-7447-0_7},
}

\bib{lurieHigherToposTheory2009}{book}{
      author={Lurie, Jacob},
       title={Higher topos theory},
      series={Annals of Mathematics Studies},
   publisher={Princeton University Press, Princeton, NJ},
        date={2009},
      volume={170},
        ISBN={978-0-691-14049-0 0-691-14049-9},
      review={\MR{2522659}},
}

\bib{maclaneSheavesGeometryLogic1994}{book}{
      author={Mac~Lane, Saunders},
      author={Moerdijk, Ieke},
       title={Sheaves in geometry and logic},
      series={Universitext},
   publisher={Springer-Verlag, New York},
        date={1994},
        ISBN={0-387-97710-4},
      review={\MR{1300636}},
}

\bib{matsuiSplitPropertySymmetry2001}{article}{
      author={Matsui, Taku},
       title={The split property and the symmetry breaking of the quantum spin
  chain},
        date={2001},
        ISSN={0010-3616},
     journal={Communications in Mathematical Physics},
      volume={218},
      number={2},
       pages={393\ndash 416},
      review={\MR{1828987}},
}

\bib{matsuiBoundednessEntanglementEntropy2013}{article}{
      author={Matsui, Taku},
       title={Boundedness of entanglement entropy and split property of quantum
  spin chains},
        date={2013},
        ISSN={0129-055X},
     journal={Reviews in Mathematical Physics},
      volume={25},
      number={9},
       pages={1350017, 31},
      review={\MR{3119173}},
}

\bib{matsuiSplitPropertyFermionic2020}{article}{
      author={Matsui, Taku},
       title={Split property and fermionic string order},
        date={2020},
        note={preprint, {\href{https://arxiv.org/abs/2003.13778}{\texttt{arXiv:2003.13778[math.OA]}}}},
}

\bib{maunderSpectralSequenceExtraordinary1963}{article}{
      author={Maunder, Charles R.~F.},
       title={The spectral sequence of an extraordinary cohomology theory},
        date={1963},
     journal={Mathematical Proceedings of the Cambridge Philosophical Society},
      volume={59},
      number={3},
       pages={567\ndash 574},
}

\bib{mayEquivariantHomotopyCohomology1996}{book}{
      author={May, J.~Peter},
       title={Equivariant homotopy and cohomology theory},
      series={{{CBMS Regional Conference Series}} in {{Mathematics}}},
   publisher={Published for the Conference Board of the Mathematical Sciences,
  Washington, DC; by the American Mathematical Society, Providence, RI},
        date={1996},
      volume={91},
        ISBN={0-8218-0319-0},
      review={\MR{1413302 (97k:55016)}},
}

\bib{mayConciseCourseAlgebraic1999}{book}{
      author={May, J.~Peter},
       title={A concise course in algebraic topology},
      series={Chicago {{Lectures}} in {{Mathematics}}},
   publisher={University of Chicago Press, Chicago, IL},
        date={1999},
        ISBN={0-226-51182-0 0-226-51183-9},
      review={\MR{1702278 (2000h:55002)}},
}

\bib{mitchenerCoarseHomotopyGroups2020}{article}{
      author={Mitchne, Paul D.},
      author={Norouzizadeh, Behnam},
      author={Schick, Thomas},
       title={Coarse homotopy groups},
        date={2020},
     journal={Mathematische Nachrichten},
      volume={293},
       pages={1515\ndash 1533},
}
    
\bib{mitsuishiGoodCoveringsAlexandrov2019}{article}{
    author={Mitsuishi, Ayato},
    author={Yamaguchi, Takao}, 
    date={2019}, 
    journal={Transactions of the American Mathematical Society},
    volume={372}, 
    number={11},
    pages={8107\ndash 8130},
}

\bib{moonAutomorphicEquivalenceGapped2020}{article}{
      author={Moon, Alvin},
      author={Ogata, Yoshiko},
       title={Automorphic equivalence within gapped phases in the bulk},
        date={2020},
        ISSN={0022-1236},
     journal={Journal of Functional Analysis},
      volume={278},
      number={8},
       pages={108422, 45},
      review={\MR{4056994}},
}

\bib{murphyAlgebrasOperatorTheory1990}{book}{
      author={Murphy, Gerard~J.},
       title={C*-algebras and operator theory},
   publisher={Academic Press, Inc., Boston, MA},
        date={1990},
        ISBN={0-12-511360-9},
      review={\MR{1074574}},
}

\bib{madsenStableModuliSpace2007}{article}{
      author={Madsen, Ib},
      author={Weiss, Michael},
       title={The stable moduli space of {{Riemann}} surfaces: {{Mumford}}'s
  conjecture},
        date={2007},
        ISSN={0003-486X,1939-8980},
     journal={Annals of Mathematics},
      volume={165},
      number={3},
       pages={843\ndash 941},
      review={\MR{2335797}},
}

\bib{nachtergaeleQuasilocalityBoundsQuantum2019}{article}{
      author={Nachtergaele, Bruno},
      author={Sims, Robert},
      author={Young, Amanda},
       title={Quasi-locality bounds for quantum lattice systems. {{I}}.
  {{Lieb-Robinson}} bounds, quasi-local maps, and spectral flow automorphisms},
        date={2019},
        ISSN={0022-2488,1089-7658},
     journal={Journal of Mathematical Physics},
      volume={60},
      number={6},
       pages={061101, 84},
      review={\MR{3964149}},
}

\bib{nachtergaeleQuasilocalityBoundsQuantum2022}{article}{
      author={Nachtergaele, Bruno},
      author={Sims, Robert},
      author={Young, Amanda},
       title={Quasi-locality bounds for quantum lattice systems. {{Part II}}.
  {{Perturbations}} of frustration-free spin models with gapped ground states},
        date={2022},
        ISSN={1424-0637,1424-0661},
     journal={Ann. Henri Poincar{\'e}},
      volume={23},
      number={2},
       pages={393\ndash 511},
      review={\MR{4386441}},
}

\bib{ogataClassificationGappedHamiltonians2019}{incollection}{
      author={Ogata, Yoshiko},
       title={Classification of gapped {{Hamiltonians}} in quantum spin
  chains},
        date={2019},
   booktitle={Operator algebras and mathematical physics},
      series={Adv. {{Stud}}. {{Pure}} Math.},
      volume={80},
   publisher={Math. Soc. Japan, Tokyo},
       pages={179\ndash 188},
      review={\MR{3966589}},
}

\bib{ogataMathbb_2Index2020}{article}{
      author={Ogata, Yoshiko},
       title={A {$\bZ_2$}-index of symmetry protected topological
  phases with time reversal symmetry for quantum spin chains},
        date={2020},
        ISSN={1432-0916},
     journal={Communications in Mathematical Physics},
      volume={374},
      number={2},
       pages={705\ndash 734},
}

\bib{ogataClassificationPureStates2021}{article}{
      author={Ogata, Yoshiko},
       title={A classification of pure states on quantum spin chains satisfying
  the split property with on-site finite group symmetries},
        date={2021},
     journal={Transactions of the American Mathematical Society. Series B},
      volume={8},
       pages={39\ndash 65},
      review={\MR{4207892}},
}

\bib{ogataValuedIndexSymmetryprotected2021}{article}{
      author={Ogata, Yoshiko},
       title={An {$H^3(G,\bT)$}-valued index of symmetry-protected
  topological phases with on-site finite group symmetry for two-dimensional
  quantum spin systems},
        date={2021},
     journal={Forum of Mathematics. Pi},
      volume={9},
       pages={Paper No. e13, 62},
      review={\MR{4354127}},
}

\bib{ogataIndexSymmetryProtected2021}{article}{
      author={Ogata, Yoshiko},
       title={A {$\bZ_2$}-index of symmetry protected topological phases with
  reflection symmetry for quantum spin chains},
        date={2021},
        ISSN={0010-3616},
     journal={Communications in Mathematical Physics},
      volume={385},
      number={3},
       pages={1245\ndash 1272},
      review={\MR{4283991}},
}

\bib{ogataInvariantSymmetryProtected2022}{article}{
      author={Ogata, Yoshiko},
       title={An {{Invariant}} of {{Symmetry Protected Topological Phases}}
  with {{On-Site Finite Group Symmetry}} for {{Two-Dimensional Fermion
  Systems}}},
        date={2022},
        ISSN={1432-0916},
     journal={Communications in Mathematical Physics},
      volume={395},
      number={1},
       pages={405\ndash 457},
}

\bib{ogata2DFermionicSPT2023}{article}{
      author={Ogata, Yoshiko},
       title={{{2D}} fermionic {{SPT}} with {{CRT}} symmetry},
        date={2023},
        ISSN={0022-2488},
     journal={Journal of Mathematical Physics},
      volume={64},
      number={9},
       pages={091901},
}

\bib{ogataGeneralLiebSchultzMattisType2021}{article}{
      author={Ogata, Yoshiko},
      author={Tachikawa, Yuji},
      author={Tasaki, Hal},
       title={General {{Lieb-Schultz-Mattis}} type theorems for quantum spin
  chains},
        date={2021},
        ISSN={0010-3616},
     journal={Communications in Mathematical Physics},
      volume={385},
      number={1},
       pages={79\ndash 99},
      review={\MR{4275782}},
}

\bib{ogataLiebSchultzMattisTypeTheorems2019}{article}{
      author={Ogata, Yoshiko},
      author={Tasaki, Hal},
       title={Lieb-{{Schultz-Mattis}} type theorems for quantum spin chains
  without continuous symmetry},
        date={2019},
        ISSN={0010-3616},
     journal={Communications in Mathematical Physics},
      volume={372},
      number={3},
       pages={951\ndash 962},
      review={\MR{4034780}},
}

\bib{ohyamaHigherStructuresMatrix2024}{article}{
      author={Ohyama, Shuhei},
      author={Ryu, Shinsei},
       title={Higher structures in matrix product states},
        date={2024},
     journal={Physical Review B},
      volume={109},
      number={11},
       pages={115152},
}

\bib{ohyamaHigherBerryPhase2024}{article}{
      author={Ohyama, Shuhei},
      author={Ryu, Shinsei},
       title={Higher Berry phase from projected entangled pair states in 
       $(2+1)$ dimensions},
        date={2025},
     journal={Physical Review B},
      volume={111},
      number={4},
       pages={045112},
}

\bib{ohyamaHigherBerryConnection2024}{article}{
      author={Ohyama, Shuhei},
      author={Ryu, Shinsei},
       title={Higher Berry connection for matrix product states},
        date={2025},
     journal={Physical Review B},
      volume={111},
      number={3},
       pages={035121},
}

\bib{ohyamaGeneralizedThoulessPumps2022}{article}{
      author={Ohyama, Shuhei},
      author={Shiozaki, Ken},
      author={Sato, Masatoshi},
       title={Generalized {{Thouless}} pumps in $1+1$-dimensional interacting
  fermionic systems},
        date={2022},
        ISSN={2469-9969},
     journal={Physical Review B},
      volume={106},
      number={16},
}

\bib{ozawaFunctionalAnalysisProof2018}{article}{
     author = {Ozawa, Narutaka},
     title = {A functional analysis proof of {Gromov's} polynomial growth theorem},
     journal = {Annales scientifiques de l'\'Ecole Normale Sup\'erieure},
     pages = {549\ndash 556},
     publisher = {Soci\'et\'e Math\'ematique de France. Tous droits r\'eserv\'es},
     volume = {Ser. 4, 51},
     number = {3},
     year = {2018},
     doi = {10.24033/asens.2360},
}

\bib{pavlovProjectiveModelStructures2022}{misc}{
      author={Pavlov, Dmitri},
       title={Projective model structures on diffeological spaces and smooth
  sets and the smooth {{Oka}} principle},
        date={2022},
        note={preprint, {\href{https://arxiv.org/abs/2210.12845}{\texttt{arXiv:2210.12845[math.AT]}}}},
}

\bib{perez-garciadandverstraetefrankandwolfmmandciracjiMatrixProductState2007}{article}{
      author={P\'{e}rez-Garc\'{i}a, David},
      author={Verstraete, Frank},
      author={Wolf, Michael M.},
      author={Ignacio, Cirac J.},
       title={Matrix product state representations},
        date={2007},
        ISSN={1533-7146},
     journal={Quantum Information \& Computation},
      volume={7},
      number={5-6},
       pages={401\ndash 430},
}
\bib{pollmannEntanglementSpectrumTopological2010}{article}{
      author={Pollmann, Frank},
      author={Turner, Ari~M.},
      author={Berg, Erez},
      author={Oshikawa, Masaki},
       title={Entanglement spectrum of a topological phase in one dimension},
        date={2010},
     journal={Physical Review B},
      volume={81},
      number={6},
       pages={064439},
}

\bib{pollmannSymmetryProtectionTopological2012}{article}{
      author={Pollmann, Frank},
      author={Berg, Erez},
      author={Turner, Ari~M.},
      author={Oshikawa, Masaki},
       title={Symmetry protection of topological phases in one-dimensional
  quantum spin systems},
        date={2012},
     journal={Physical Review B},
      volume={85},
      number={7},
       pages={075125},
}

\bib{qiChartingSpaceGround2023}{misc}{
      author={Qi, Marvin},
      author={Stephen, David~T.},
      author={Wen, Xueda},
      author={Spiegel, Daniel},
      author={Pflaum, Markus~J.},
      author={Beaudry, Agn{\`e}s},
      author={Hermele, Michael},
       title={Charting the space of ground states with tensor networks},
        date={2023},
        note={preprint, {\href{https://arxiv.org/abs/2305.07700}{\texttt{arXiv:2305.07700 [cond-mat.str-el]}}}},
}

\bib{qiaoLocalizationAlgebraGuoliang2010}{article}{
      author={Qiao, Yu},
      author={Roe, John},
       title={On the localization algebra of {{Guoliang Yu}}},
        date={2010},
        ISSN={0933-7741},
     journal={Forum Mathematicum},
      volume={22},
      number={4},
       pages={657\ndash 665},
      review={\MR{2661442 (2011k:19006)}},
}

\bib{quinnEndsMapsII1982}{article}{
      author={Quinn, Frank},
       title={Ends of maps. {{II}}},
        date={1982},
        ISSN={0020-9910,1432-1297},
     journal={Inventiones Mathematicae},
      volume={68},
      number={3},
       pages={353\ndash 424},
      review={\MR{669423}},
}

\bib{ranardConverseLiebRobinson2022}{article}{
      author={Ranard, Daniel},
      author={Walter, Michael},
      author={Witteveen, Freek},
       title={A {{Converse}} to {{Lieb}}--{{Robinson Bounds}} in {{One
  Dimension Using Index Theory}}},
        date={2022},
        ISSN={1424-0661},
     journal={Annales Henri Poincar{\'e}},
      volume={23},
      number={11},
       pages={3905\ndash 3979},
}

\bib{roeCoarseCohomologyIndex1993}{article}{
      author={Roe, John},
       title={Coarse cohomology and index theory on complete {{Riemannian}}
  manifolds},
        date={1993},
        ISSN={0065-9266},
     journal={Memoirs of the American Mathematical Society},
      volume={104},
      number={497},
       pages={x+90},
      review={\MR{1147350 (94a:58193)}},
}

\bib{roeIndexTheoryCoarse1996}{book}{
      author={Roe, John},
       title={Index theory, coarse geometry, and topology of manifolds},
      series={{{CBMS Regional Conference Series}} in {{Mathematics}}},
   publisher={Published for the Conference Board of the Mathematical Sciences,
  Washington, DC; by the American Mathematical Society, Providence, RI},
        date={1996},
      volume={90},
        ISBN={0-8218-0413-8},
      review={\MR{1399087 (97h:58155)}},
}
\bib{roeLecturesCoarseGeometry2003}{book}{
      author={Roe, John},
       title={Lectures on coarse geometry},
      series={University {{Lecture Series}}},
   publisher={American Mathematical Society, Providence, RI},
        date={2003},
      volume={31},
        ISBN={0-8218-3332-4},
      review={\MR{2007488 (2004g:53050)}},
}

\bib{rudyakThomSpectraOrientability1998}{book}{
      author={Rudyak, Yuli~B.},
       title={On {{Thom}} spectra, orientability, and cobordism},
      series={Springer Monographs in Mathematics},
   publisher={Springer-Verlag, Berlin},
        date={1998},
        ISBN={3-540-62043-5},
      review={\MR{1627486}},
}

\if0
\bib{sakaiAlgebrasAlgebras1998}{book}{
      author={Sakai, Shoichiro},
       title={C*-algebras and {{W}}*-algebras},
   publisher={Springer Berlin Heidelberg},
     address={Berlin, Heidelberg},
        date={1998},
        ISBN={978-3-642-61993-9},
}
\fi

\bib{schnyderClassificationTopologicalInsulators2008}{article}{
      author={Schnyder, Andreas~P.},
      author={Ryu, Shinsei},
      author={Furusaki, Akira},
      author={Ludwig, Andreas W.~W.},
       title={Classification of topological insulators and superconductors in
  three spatial dimensions},
        date={2008},
     journal={Physical Review B},
      volume={78},
      number={19},
       pages={195125},
}

\bib{schnyderClassificationTopologicalInsulators2009}{article}{
      author={Schnyder, Andreas~P.},
      author={Ryu, Shinsei},
      author={Furusaki, Akira},
      author={Ludwig, Andreas W.~W.},
       title={Classification of topological insulators and superconductors},
        date={2009},
     journal={AIP Conference Proceedings},
      volume={1134},
      number={1},
       pages={10\ndash 21},
      eprint={https://aip.scitation.org/doi/pdf/10.1063/1.3149481},
}

\bib{segalCohomologyTopologicalGroups1970}{incollection}{
      author={Segal, Graeme},
       title={Cohomology of topological groups},
        date={1970},
   booktitle={Symposia {{Mathematica}}, {{Vol}}. {{IV}} ({{INDAM}}, {{Rome}},
  1968/69)},
   publisher={Academic Press, London},
       pages={377\ndash 387},
      review={\MR{0280572 (43 \#6292)}},
}

\bib{serreCohomologieModuloComplexes1953}{article}{
      author={Serre, Jean-Pierre},
       title={Cohomologie modulo 2 des complexes d'{{Eilenberg-MacLane}}},
        date={1953},
        ISSN={0010-2571},
     journal={Commentarii Mathematici Helvetici},
      volume={27},
       pages={198\ndash 232},
      review={\MR{0060234 (15,643c)}},
}

\bib{shalomHarmonicAnalysisCohomology2004}{article}{
      author={Shalom, Yehuda},
       title={Harmonic analysis, cohomology, and the large-scale geometry of amenable groups},
        date={2004},
     journal={Acta Mathematica},
      volume={192},
      number={2},
       pages={119\ndash 185},
}

\bib{shiozakiHigherBerryCurvature2023}{misc}{
      author={Shiozaki, Ken},
      author={Heinsdorf, Niclas},
      author={Ohyama, Shuhei},
       title={Higher {{Berry}} curvature from matrix product states},
        date={2023},
        note={preprint, {\href{https://arxiv.org/abs/2305.08109}{\texttt{arXiv:2305.08109[quant-ph]}}}},
}

\bib{shiozakiGeneralizedHomologyAtiyah2023}{article}{
      author={Shiozaki, Ken},
      author={Xiong, Charles~Zhaoxi},
      author={Gomi, Kiyonori},
       title={Generalized homology and {{Atiyah}}--{{Hirzebruch}} spectral
  sequence in crystalline symmetry protected topological phenomena},
        date={2023},
        ISSN={2050-3911},
     journal={Progress of Theoretical and Experimental Physics},
      volume={2023},
      number={8},
       pages={083I01},
}

\bib{sommerHigherBerryCurvature2024}{misc}{
      author={Sommer, Ophelia~Evelyn},
      author={Wen, Xueda},
      author={Vishwanath, Ashvin},
       title={Higher Berry curvature from the wave function {{I}}: {{Schmidt}}
  decomposition and matrix product states},
        date={2024},
        note={preprint, {\href{https://arxiv.org/abs/2405.05316}{\texttt{arXiv:2405.05316[cond-mat.str-el]}}}},
}

\bib{sommerHigherBerryCurvature2024a}{misc}{
      author={Sommer, Ophelia~Evelyn},
      author={Vishwanath, Ashvin},
      author={Wen, Xueda},
       title={Higher Berry curvature from the wave function {{II}}: {{Locally}}
  parameterized states beyond one dimension},
        date={2024},
        note={preprint, {\href{https://arxiv.org/abs/2405.05323}{\texttt{arXiv:2405.05323[cond-mat. str-el]}}}},
}

\bib{sopenkoIndexTwodimensionalSPT2021}{article}{
      author={Sopenko, Nikita},
       title={An index for two-dimensional {{SPT}} states},
        date={2021},
     journal={Journal of Mathematical Physics},
      volume={62},
      number={11},
       pages={111901},
      eprint={https://doi.org/10.1063/5.0055704},
}

\bib{sopenkoChiralTopologicallyOrdered2023}{misc}{
      author={Sopenko, Nikita},
       title={Chiral topologically ordered states on a lattice from vertex
  operator algebras},
        date={2023},
        note={preprint, {\href{https://arxiv.org/abs/2301.08697}{\texttt{arXiv: 2301.08697[cond-mat.str-el]}}}},
}

\bib{spakulaKtheoryUniformRoe2008}{article}{
      author={{\v S}pakula, J{\'a}n},
       title={{{K}}-theory of uniform {{Roe}} algebras},
        date={2008},
        note={Ph.D thesis, Vandervilt University},
}

\bib{spiegelWeakContractibilitySpace2024}{article}{
      author={Spiegel, Daniel~D.},
      author={Pflaum, Markus~J.},
       title={On the weak contractibility of the space of pure states},
        date={2024},
     journal={Journal of Functional Analysis},
      volume={288},
      number={6},
       pages={110809},
}

\bib{srinivasAlgebraicKtheory2008}{book}{
      author={Srinivas, V.},
       title={Algebraic {{K-theory}}},
     edition={2},
      series={Modern Birkh{\"a}user Classics},
   publisher={Birkh{\"a}user Boston, Inc., Boston, MA},
        date={2008},
        ISBN={978-0-8176-4736-0},
      review={\MR{2371852}},
}

\bib{steenrodProductsCocyclesExtensions1947}{article}{
      author={Steenrod, N.~E.},
       title={Products of cocycles and extensions of mappings},
        date={1947},
        ISSN={0003-486X},
     journal={Annals of Mathematics},
      volume={48},
       pages={290\ndash 320},
      review={\MR{22071}},
}

\bib{switzerAlgebraicTopologyHomotopy2002}{book}{
      author={Switzer, Robert~M.},
       title={Algebraic topology---homotopy and homology},
      series={Classics in Mathematics},
   publisher={Springer-Verlag, Berlin},
        date={2002},
        ISBN={3-540-42750-3},
      review={\MR{1886843}},
}

\bib{takesakiTheoryOperatorAlgebras2002}{book}{
      author={Takesaki, Masamichi},
       title={Theory of operator algebras {{I}}},
   publisher={Springer Berlin Heidelberg},
     address={Berlin, Heidelberg},
        date={2002},
        ISBN={978-3-540-42248-8},
}

\bib{tasakiRigorousIndexTheory2023}{article}{
      author={Tasaki, Hal},
       title={Rigorous index theory for one-dimensional interacting topological
  insulators},
        date={2023},
     journal={Journal of Mathematical Physics},
      volume={64},
      number={4},
       pages={041903},
}

\bib{tasakiGroundStateS12025}{article}{
  title = {Ground State of the $S=1$ Antiferromagnetic Heisenberg Chain Is Topologically Nontrivial if Gapped},
  author = {Tasaki, Hal},
  journal={Physical Review B},
  volume = {134},
  number = {7},
  pages = {076602},
  year = {2025},
  publisher = {American Physical Society},
  doi = {10.1103/PhysRevLett.134.076602},
  url = {https://link.aps.org/doi/10.1103/PhysRevLett.134.076602},
}

\bib{thomasHomotopyClassificationMaps1964}{article}{
      author={Thomas, Emery},
       title={Homotopy classification of maps by cohomology homomorphisms},
        date={1964},
        ISSN={00029947},
     journal={Transactions of the American Mathematical Society},
      volume={111},
      number={1},
       pages={138\ndash 151},
      eprint={1993670},
}

\bib{thorngrenGaugingSpatialSymmetries2018}{article}{
  title = {Gauging Spatial Symmetries and the Classification of Topological Crystalline Phases},
  author = {Thorngren, Ryan},
  author = {Else, Dominic V.},
  journal = {Physical Review X},
  volume = {8},
  number = {1},
  pages = {011040},
  year = {2018},
  doi = {10.1103/PhysRevX.8.011040},
}

\bib{tuGroupoidCohomologyExtensions2006}{article}{
      author={Tu, Jean-Louis},
       title={Groupoid cohomology and extensions},
        date={2006},
        ISSN={0002-9947},
     journal={Transactions of the American Mathematical Society},
      volume={358},
      number={11},
       pages={4721\ndash 4747 (electronic)},
      review={\MR{2231869 (2007i:22008)}},
}

\bib{veronaTriangulationStratifiedFibre1979}{article}{
      author={Verona, Andrei},
       title={Triangulation of stratified fibre bundles},
        date={1979},
        ISSN={1432-1785},
     journal={Manuscripta Mathematica},
      volume={30},
      number={4},
       pages={425\ndash 445},
}

\bib{waasStratifiedHomotopyTheory2021}{misc}{
      author={Waas, Lukas},
       title={Stratified homotopy theory and a whitehead group for stratified
  spaces},
        date={2021},
        note={Master thesis, University of Heidelberg.
  {\href{https://arxiv.org/abs/2102.06971}{\texttt{arXiv:2102.06971[math.AT]}}}},
}

\bib{wallGradedBrauerGroups1964}{article}{
      author={Wall, C.T.C.},
       title={Graded brauer groups.},
        date={1964},
     journal={Journal f{\"u}r die reine und angewandte Mathematik},
      volume={213},
       pages={187\ndash 199},
}

\bib{wangCompleteClassificationSymmetryprotected2018}{article}{
      author={Wang, Qing-Rui},
      author={Gu, Zheng-Cheng},
       title={Towards a complete classification of symmetry-protected
  topological phases for interacting fermions in three dimensions and a general
  group supercohomology theory},
        date={2018},
     journal={Physical Review X},
      volume={8},
      number={1},
       pages={011055},
}

\bib{wangConstructionClassificationSymmetryprotected2020}{article}{
  title = {Construction and Classification of Symmetry-Protected Topological Phases in Interacting Fermion Systems},
  author = {Wang, Qing-Rui},
  author = {Gu, Zheng-Cheng},
  year = {2020},
  journal = {Physical Review X},
  volume = {10},
  number = {3},
  pages = {031055},
  publisher = {American Physical Society},
  doi = {10.1103/PhysRevX.10.031055},
}

\bib{wenFlowHigherBerry2023}{article}{
      author={Wen, Xueda},
      author={Qi, Marvin},
      author={Beaudry, Agn{\`e}s},
      author={Moreno, Juan},
      author={Pflaum, Markus~J.},
      author={Spiegel, Daniel},
      author={Vishwanath, Ashvin},
      author={Hermele, Michael},
       title={Flow of higher {{Berry}} curvature and bulk-boundary
  correspondence in parametrized quantum systems},
        date={2023},
     journal={Physical Review B},
      volume={108},
      number={12},
       pages={125147},
}

\bib{wignerAlgebraicCohomologyTopological1970}{article}{
      author={Wigner, David},
       title={Algebraic cohomology of topological groups},
        date={1970},
        ISSN={0002-9904},
      volume={76},
       pages={825\ndash 826},
      review={\MR{263975}},
}

\bib{willettHigherIndexTheory2020}{book}{
      author={Willett, Rufus},
      author={Yu, Guoliang},
       title={Higher index theory},
      series={Cambridge Studies in Advanced Mathematics},
   publisher={Cambridge University Press},
     address={Cambridge},
        date={2020},
}

\bib{xiongMinimalistApproachClassification2018}{article}{
      author={Xiong, Charles~Zhaoxi},
       title={Minimalist approach to the classification of symmetry protected
  topological phases},
        date={2018},
     journal={Journal of Physics A: Mathematical and Theoretical},
      volume={51},
      number={44},
       pages={445001},
}

\bib{yamashitaDifferentialModelsAnderson2023a}{article}{
      author={Yamashita, Mayuko},
       title={Differential models for the {{Anderson}} dual to bordism theories
  and invertible {{QFT}}'s, {{II}}},
        date={2023},
        ISSN={1935-2565},
     journal={J. G{\"o}kova Geom. Topol. GGT},
      volume={16},
       pages={65\ndash 97},
      review={\MR{4641958}},
}

\bib{yamashitaDifferentialModelsAnderson2023}{article}{
      author={Yamashita, Mayuko},
      author={Yonekura, Kazuya},
       title={Differential models for the {{Anderson}} dual to bordism theories
  and invertible {{QFT}}'s, {{I}}},
        date={2023},
        ISSN={1935-2565},
     journal={J. G{\"o}kova Geom. Topol. GGT},
      volume={16},
       pages={1\ndash 64},
      review={\MR{4641957}},
}

\bib{yaoModulatingHamiltonianApproach2024}{article}{
  title = {Modulating {{Hamiltonian}} Approach to Quantum Many-Body Systems and Crystalline Topological Phases Protected by Generalized Magnetic Translations},
  author = {Yao, Yuan},
  author = {Furusaki, Akira},
  year = {2024},
  journal = {Physical Review B},
  volume = {110},
  number = {9},
  pages = {094410},
  publisher = {American Physical Society},
  doi = {10.1103/PhysRevB.110.094410}
}

\bib{yarotskyPerturbationsGroundStates2004}{article}{
      author={Yarotsky, D.~A.},
       title={Perturbations of ground states in weakly interacting quantum spin
  systems},
        date={2004},
        ISSN={0022-2488},
     journal={Journal of Mathematical Physics},
      volume={45},
      number={6},
       pages={2134\ndash 2152},
  eprint={https://pubs.aip.org/aip/jmp/article-pdf/45/6/2134/19274131/2134{\textbackslash}\_1{\textbackslash}\_online.pdf},
}

\bib{yuCoarseBaumConnesConjecture1995}{article}{
      author={Yu, Guo~Liang},
       title={Coarse {{Baum-Connes}} conjecture},
        date={1995},
        ISSN={0920-3036},
     journal={K-Theory},
      volume={9},
      number={3},
       pages={199\ndash 221},
      review={\MR{1344138}},
}

\bib{yuLocalizationAlgebrasCoarse1997}{article}{
      author={Yu, Guoliang},
       title={Localization algebras and the coarse {{Baum-Connes}} conjecture},
        date={1997},
        ISSN={0920-3036},
     journal={K-Theory},
      volume={11},
      number={4},
       pages={307\ndash 318},
      review={\MR{1451759}},
}

\bib{yuCoarseBaumConnesConjecture2000}{article}{
      author={Yu, Guoliang},
       title={The coarse Baum--Connes conjecture for spaces which admit a uniform embedding into Hilbert space},
        date={2000},
     journal={Inventiones mathematicae},
      volume={139},
       pages={201\ndash 240},
}
\bib{zeidlerAppliedFunctionalAnalysis1995}{book}{
      author={Zeidler, Eberhard},
       title={Applied functional analysis: {{Main}} principles and their
  applications},
   publisher={Springer New York},
     address={New York, NY},
        date={1995},
        ISBN={978-1-4612-0821-1},
}

\end{biblist}
\end{bibdiv}
\end{document}